\theoremstyle{plain}
\newtheorem{thm}{Theorem}[chapter]
\newtheorem{prop}[thm]{Proposition}
\newtheorem{lemma}[thm]{Lemma}
\newtheorem{cor}[thm]{Corollary}
\theoremstyle{definition}
\newtheorem{definition}[thm]{Definition}
\newtheorem{remark}[thm]{Remark}
\newcommand{\tn}[1]{\ensuremath{\mathbb{T}^{#1}}}
\newcommand{\rn}[1]{\ensuremath{\mathbb{R}^{#1}}}
\newcommand{\nn}[1]{\ensuremath{\mathbb{N}^{#1}}}
\newcommand{\sn}[1]{\ensuremath{\mathbb{S}^{#1}}}
\newcommand{\ro}{\mathbb{R}}
\newcommand{\ldr}[1]{\langle #1\rangle}
\newcommand{\g}{\gamma}
\newcommand{\de}{\delta}
\renewcommand{\a}{\alpha}
\renewcommand{\b}{\beta}
\newcommand{\mfE}{\mathfrak{E}}
\newcommand{\mfD}{\mathfrak{D}}
\newcommand{\mfP}{\mathfrak{P}}
\newcommand{\mfR}{\mathfrak{R}}
\newcommand{\mfg}{\mathfrak{g}}
\newcommand{\mfh}{\mathfrak{h}}
\newcommand{\mff}{\mathfrak{f}}
\newcommand{\mfT}{\mathfrak{T}}
\newcommand{\mft}{\mathfrak{t}}
\newcommand{\mfI}{\mathfrak{I}}
\newcommand{\mfB}{\mathfrak{B}}
\newcommand{\mfS}{\mathfrak{S}}
\newcommand{\mfs}{\mathfrak{s}}
\newcommand{\mfr}{\mathfrak{r}}
\newcommand{\mfG}{\mathfrak{G}}
\newcommand{\bq}{\bar{q}}
\newcommand{\bA}{\bar{A}}
\newcommand{\fbar}{\bar{f}}
\newcommand{\be}{\bar{e}}
\newcommand{\e}{\epsilon}
\newcommand{\bK}{\bar{K}}
\newcommand{\bn}{\bar{n}}
\newcommand{\bnu}{\bar{\nu}}
\newcommand{\bp}{\bar{p}}
\newcommand{\bR}{\bar{R}}
\newcommand{\biota}{\bar{\iota}}
\newcommand{\bvarphi}{\bar{\varphi}}
\newcommand{\bth}{\bar{\theta}}
\newcommand{\bvka}{\bar{\varkappa}}
\newcommand{\bka}{\bar{\kappa}}
\newcommand{\bPhi}{\bar{\Phi}}
\newcommand{\bOmega}{\bar{\Omega}}
\newcommand{\bo}{\bar{o}}
\newcommand{\bP}{\bar{P}}
\newcommand{\bphi}{\bar{\phi}}
\newcommand{\bpsi}{\bar{\psi}}
\newcommand{\bsi}{\bar{\sigma}}
\newcommand{\bS}{\bar{S}}
\newcommand{\bge}{\bar{g}}
\newcommand{\bk}{\bar{k}}
\newcommand{\bM}{\bar{M}}
\newcommand{\bx}{\bar{x}}
\newcommand{\sgn}{\mathrm{sgn}}
\newcommand{\roRic}{\mathrm{Ric}}
\newcommand{\ovRic}{\overline{\mathrm{Ric}}}
\newcommand{\rosc}{\mathrm{sc}}
\newcommand{\roqu}{\mathrm{qu}}
\newcommand{\iso}{\mathrm{iso}}
\newcommand{\roId}{\mathrm{Id}}
\newcommand{\roper}{\mathrm{per}}
\newcommand{\ropar}{\mathrm{par}}
\newcommand{\ropre}{\mathrm{pre}}
\newcommand{\rogen}{\mathrm{gen}}
\newcommand{\roLRS}{\mathrm{LRS}}
\newcommand{\roap}{\mathrm{ap}}
\newcommand{\ropp}{\mathrm{pp}}
\newcommand{\roev}{\mathrm{ev}}
\newcommand{\roodd}{\mathrm{odd}}
\newcommand{\rodiv}{\mathrm{div}}
\newcommand{\rodS}{\mathrm{dS}}
\newcommand{\rod}{\mathrm{d}}
\newcommand{\rodiag}{\mathrm{diag}}
\newcommand{\rotr}{\mathrm{tr}}
\newcommand{\roc}{\mathrm{c}}
\newcommand{\rond}{\mathrm{nd}}
\newcommand{\roloc}{\mathrm{loc}}
\newcommand{\mrI}{\mathrm{I}}
\newcommand{\mrII}{\mathrm{II}}
\newcommand{\mrVIz}{\mathrm{VI}_{0}}
\newcommand{\mrVIIz}{\mathrm{VII}_{0}}
\newcommand{\mrVIII}{\mathrm{VIII}}
\newcommand{\mrIX}{\mathrm{IX}}
\newcommand{\bah}{\bar{h}}
\newcommand{\msH}{\mathscr{H}}
\newcommand{\hf}{\hat{f}}
\newcommand{\mV}{\mathcal{V}}
\newcommand{\mZ}{\mathcal{Z}}
\newcommand{\ma}{\mathcal{A}}
\newcommand{\mB}{\mathcal{B}}
\newcommand{\mN}{\mathcal{N}}
\newcommand{\mS}{\mathcal{S}}
\newcommand{\mC}{\mathcal{C}}
\newcommand{\mH}{\mathcal{H}}
\newcommand{\md}{\mathcal{D}}
\newcommand{\mD}{\mathcal{D}}
\newcommand{\me}{\mathcal{E}}
\newcommand{\mX}{\mathcal{X}}
\newcommand{\mfX}{\mathfrak{X}}
\newcommand{\mK}{\mathcal{K}}
\newcommand{\msK}{\mathscr{K}}
\newcommand{\sfA}{\mathsf{A}}
\newcommand{\sfB}{\mathsf{B}}
\newcommand{\sfG}{\mathsf{G}}
\newcommand{\sfQ}{\mathsf{Q}}
\newcommand{\sfF}{\mathsf{F}}
\newcommand{\sfP}{\mathsf{P}}
\newcommand{\sfN}{\mathsf{N}}
\newcommand{\sfS}{\mathsf{S}}
\newcommand{\sfU}{\mathsf{U}}
\newcommand{\sfV}{\mathsf{V}}
\newcommand{\sfW}{\mathsf{W}}
\newcommand{\sfR}{\mathsf{R}}
\newcommand{\tr}{\mathrm{tr}}
\renewcommand{\d}{\partial}
\newcommand{\bfI}{\mathbf{I}}
\newcommand{\Sp}{\Sigma_{+}}
\newcommand{\Sm}{\Sigma_{-}}
\newcommand{\No}{N_{1}}
\newcommand{\bN}{\bar{N}}
\newcommand{\bNo}{\bar{N}_{1}}
\newcommand{\Nt}{N_{2}}
\newcommand{\bNt}{\bar{N}_{2}}
\newcommand{\Nth}{N_{3}}
\newcommand{\bNth}{\bar{N}_{3}}
\newcommand{\bSp}{\bar{\Sigma}_{+}}
\newcommand{\bSm}{\bar{\Sigma}_{-}}
\begin{document}

\author{Hans Ringstr\"{o}m}
\title{On the structure of big bang singularities in spatially homogenous solutions to the Einstein non-linear scalar field equations}

\maketitle

\frontmatter

\include{dedicbook}

\setcounter{tocdepth}{1}
\tableofcontents

\mainmatter

\chapter{Introduction}

The subject of this article is the structure of big bang singularities in spatially homogeneous solutions to the \textit{Einstein non-linear scalar
  field equations}.
\index{Einstein non-linear scalar field equations}%
Given a \textit{potential} $V\in C^{\infty}(\ro)$,
\index{Potential}%
we are, in other words, interested in solving
\begin{subequations}\label{seq:ENLSFE}
  \begin{align}
    G = & T,\\
    \Box_{g}\phi = & V'\circ\phi
  \end{align}
\end{subequations}
for a Lorentz manifold $(M,g)$ and a function $\phi\in C^{\infty}(M)$, where $\Box_{g}$ is the wave operator associated with $g$, 
\begin{align*}
  G := & \mathrm{Ric}-\tfrac{1}{2}Sg,\\
  T := & d\phi\otimes d\phi-\left[\tfrac{1}{2}|d\phi|_g^2+V\circ\phi\right]g.
\end{align*}
Here $\mathrm{Ric}$ and $S$ denote the Ricci tensor and scalar curvature of $g$ respectively. Moreover,  $\phi$ is referred to as the \textit{scalar field};
\index{Scalar field}%
$G$ as the \textit{Einstein tensor};
\index{Einstein tensor}%
and $T$ as the \textit{stress energy tensor}.
\index{Stress energy tensor}%
Solutions to (\ref{seq:ENLSFE}), or systems obtained by including additional matter sources, have been studied by many authors. The reason is that they
can model both inflation and dark energy. However, the questions we consider in this article differ from the ones normally asked by physicists.
Our main goal in this article is to obtain a global picture of the dynamics in the spatially homogeneous 
setting. Using the corresponding results, one can prove that there are large classes of spatially locally homogeneous solutions that are globally
non-linearly stable (in the absence of symmetries), both to the future and to the past. However, our main motivation stems from a desire to obtain
a more general understanding of the structure of big bang singularities, as we now explain. 

\textbf{The structure of cosmological singularities; the BKL proposal and initial data on the big bang.} One important proposal concerning the structure
of big bang singularities is due to Belinski\v{\i}, Khalatnikov and Lifschitz, BKL for short; see, e.g., \cite{bkl1,bkl15,bkl2}. Very roughly speaking,
their main hypothesis is that big bang singularities are local, spacelike and oscillatory. In particular, they suggest that ``spatial derivatives''
are asymptotically negligible, so that the evolution equations reduce to a family of ODE's; that observers going into the singularity typically
lose the ability to communicate; and that the eigenvalues of the expansion normalised Weingarten maps of an appropriate foliation oscillate according
to a chaotic one-dimensional map. Recall here that the Weingarten map of a spacelike hypersurface is the second fundamental form with one index raised
by the induced metric. Moreover, the expansion normalised Weingarten map is the Weingarten map divided by the mean curvature. There are, however,
exceptions to this hypothesis. For example, particular types of matter, such as a stiff fluid or a scalar field, are expected to suppress the
oscillations so that the expansion normalised Weingarten map converges. The corresponding setting is referred to as convergent or quiescent. Non-linear
scalar fields belong to this category of matter models, and we are here therefore mainly interested in the quiescent setting. Several attempts to
formalise the ideas of BKL have been formulated in the literature; see, e.g., \cite{bkl1,bkl15,bkl2,ELS,JimandVince,luw,huar,HUL,dadB,dhn,dah} and
the references cited therein. In \cite{ELS,JimandVince}, the authors develop the idea of a solution being \textit{asymptotically velocity term dominated
  near the singularity} (AVTDS); a solution is AVTDS if it asymptotes to a solution of the so-called \textit{velocity term dominated} (VTD) system.
Roughly speaking, the VTD system is a truncated system of equations obtained by dropping some of the spatial derivatives. However, it is unfortunately
not geometric in nature. In \cite{RinQC}, a substitute in the form of a geometric
notion of initial data on the singularity is introduced; see Definition~\ref{def:ndvacidonbbssh} for a slight generalisation in the present setting.
In the special case of a Gaussian foliation, a solution to the VTD system can be thought of as initial data on the singularity; see \cite{RinQC} for
a justification of this statement. However, the notion of initial data on the singularity is not tied to Gaussian foliations. In fact, beyond 
covariance, this notion has several advantages: that a development induces data on the singularity can be interpreted as a covariant formulation
of the statement that it is AVTDS; if there are initial data on the singularity in a certain symmetry class, it is natural to conjecture that generic
solutions induce data on the singularity in this symmetry class, so that generic solutions are quiescent; if there are no initial data on the
singularity in a given symmetry class, generic solutions can be expected to be oscillatory in this symmetry class; and in quiescent settings, one
could ideally hope to parametrise solutions by data on the singularity. There are several caveats to these expectations. However, it is clear that
initial data on the singularity is a natural starting point for understanding the structure of singularities, and this is the perspective we take in
this article. 

\textbf{The map from initial data on the singularity to developments.}
Given a notion of initial data on the singularity, the first question to ask is: do data uniquely determine corresponding developments? That the answer
is yes in the $3+1$-dimensional Einstein non-linear scalar field setting is verified in \cite{andres}, a result that generalises
\cite{aarendall,klinger,fal}. The results of \cite{aarendall,klinger,fal,andres} are all fundamental. On the other hand, they do not provide a global
picture of the dynamics. In fact, it can be argued
that they do not even give a local picture of the dynamics in the sense that it is not clear from these results that there is an open set of regular
initial data whose developments give rise to data on the singularity. The reason for this deficiency is that the results of, e.g.,
\cite{aarendall,klinger,fal,andres} do not include any statements concerning the regularity of the map from data on the singularity to the corresponding
developments. If one were, e.g., able to prove that the map, say $F$, from the set of isometry classes of initial data on the singularity to the set of
isometry classes of developments is a local diffeomorphism, one could immediately conclude, e.g., that developments that induce data on the singularity
are stable and that there is an open set of regular initial data whose developments induce data on the singularity. If $F$ is a local diffeomorphism, one
could also argue that exceptional subsets of initial data on the singularity correspond to exceptional
subsets of developments. Turning to the global properties of $F$, the ideal situation is when $F$ is
a diffeomorphism. In that setting, isometry classes of initial data on the singularity do not only provide optimal asymptotic information, they
also parametrise the set of isometry classes of developments. However, even when $F$ is not a global diffeomorphism, the image of $F$ is a natural
reference and one can focus on the complement of the image of $F$. This complement might, e.g., be non-generic, and thereby irrelevant; it could include
an open set of solutions that do not have crushing singularities and are therefore irrelevant in the analysis of singularities etc. If one can verify that
the topology of the set of isometry classes of data on the singularity differs from the topology of the set of isometry classes of developments, it is
clear that $F$ cannot be a diffeomorphism. This can be used to make deductions concerning the global dynamics of solutions; below we provide
examples of this. 

\textbf{Spikes.}
In the spatially homogeneous setting, it is natural to conjecture generic developments to belong to the image of $F$. We prove several such results
in this article. However, in the spatially inhomogeneous setting, this is not to be expected. This is partly related to what is referred to as
``spikes'' in the $\tn{3}$-Gowdy symmetric setting. The reason for the terminology is that the metric is invariant under an effective action of
$\mathbb{T}^2$ in this setting, with the consequence that the metric components only depend on $(t,\vartheta)\in (0,\infty)\times\sn{1}$ (where the
$t$-coordinate of a spacetime point $\xi$ is the area of the orbit of $\xi$ under the action of $\tn{2}$). Moreover, the limits, in the direction of
the singularity, of the eigenvalues of the expansion normalised Weingarten maps of the constant-$t$ hypersurfaces are functions on $\sn{1}$. The
limits are typically smooth, except for a discrete set of elements of $\sn{1}$ at which they make jumps. Since the limits are smooth when removing the
discrete set of points, and make finite jumps at the exceptional points, the discontinuities are called spikes. However, due to the symmetry reduction,
the spikes actually correspond to $2$-tori on the singularity (when representing the singularity as a $3$-torus). From that point of view, it is more
natural to think of the spikes as ``scars''; two dimensional hypersurfaces on the singularity. In the $\tn{3}$-Gowdy symmetric setting, the
solutions with spikes can be built from developments arising from data on the singularity; see \cite{raw}. However, even more generally, it is natural
to expect scars to be subsets of the zero sets of the limits of structure coefficients associated with suitably normalised eigenvector fields of the
expansion
normalised Weingarten map. If one can demonstrate that the relevant limits are sufficiently regular and generically have the property that their
derivatives are non-zero when they vanish, one can hope to prove that the scars correspond to hypersurfaces more generally. Note, however, that any
such analysis is crucially dependent on understanding the properties of the map from regular initial data to asymptotic data. Note also
that if one ignores the spikes, generic $\tn{3}$-Gowdy symmetric vacuum solutions induce data on the remaining part of the singularity; see
\cite{RinQCSymm}. To summarise, even in the presence of spikes, it can be expected to be of crucial importance to understand the regularity properties
of the map from regular data to data on the singularity. 

\textbf{The attractor.} Concerning oscillatory solutions, there are only partial results in the spatially homogeneous setting, and no results in the
spatially inhomogeneous setting. However, there are some features of the analysis in the spatially homogeneous setting that may be relevant more
generally. For instance, some of the results in the spatially homogeneous setting are based on a detailed analysis of how solutions approach the
Kasner circle along a stable manifold and then diverge away from it along an unstable manifold. The first step of the analysis is thus to understand
what the stable and unstable manifolds are. In the spatially homogeneous setting, this is straightforward. However, in the spatially inhomogeneous
setting, it is more complicated. Nevertheless, it is quite conceivable that they can be obtained as sets of developments corresponding to initial
data on the singularity as in \cite{fal}, \cite{klinger} or \cite{andres}. However, to understand the corresponding subset of regular initial data, we
need to understand the regularity of the map $F$. Similarly, but more ambitiously, one can hope to construct an attractor. Again, this can be expected
to require an understanding of the regularity of the map $F$.

\textbf{Motivation, goals.} With the above observations in mind, it is clear that it is not only of interest to understand the structure of cosmological
singularities in the non-linear scalar field setting. It is of particular interest to do so from the perspective of initial data on the singularity.
Moreover, it is of central interest to understand the regularity of the map from initial data on the singularity to developments. The main purpose of
this article is to provide a global picture of the dynamics in the spatially homogenous setting, with particular focus on the regularity properties of
the map $F$. This yields a detailed analysis in a model case which can then be used as a reference when turning to situations with less symmetry.
Here we mainly focus on Bianchi class A solutions; i.e., the case that the relevant spacetimes arise from left invariant initial data on unimodular
$3$-dimensional Lie groups. However, we also discuss the $k=-1$ FLRW setting. 

Concerning the Einstein scalar field case, \cite{RinQCSymm} contains a demonstration of the fact that Bianchi class A solutions induce data on the
singularity unless they are vacuum solutions of type VIII or IX; see Table~\ref{table:bianchiA} for the relevant classification. Moreover, vacuum Bianchi
type VIII and IX solutions are either oscillatory or have a Cauchy horizon. On the other hand, \cite{RinQCSymm} does not contain a discussion of the
regularity of the map $F$. Moreover, even though the scalar field setting is of interest, the non-linear scalar field setting is more important, since
it, in the physics literature, is used to model inflation and used as a mechanism to generate accelerated expansion in the late universe. Note, however,
that when physicists study inflation, they are not interested in the asymptotics in the direction in the singularity. Instead, they are interested in
a regime which is close to the singularity but distinct from it. Moreover, in the expanding direction, the interesting situation is the transition from
a matter dominated phase to a phase where the dark energy takes over. Since we are here interested in the asymptotics in the direction of the singularity
and not in an intermediate phase, our analysis is not directly relevant to the study of inflation, and since we do not include additional matter, our solutions do
not model the transition from a matter dominated phase to where the dark energy takes over. On the other hand, due to our analysis, we are able to prove
global non-linear stability of a large class of solutions which are good models of the early and late universe. Moreover, the analysis is a natural starting
point for including mattter. 

As big bang formation has been analysed in the Einstein scalar field setting, one might expect analysing the past asymptotics to be
straightforward in the non-linear scalar field setting. However, the presence of the potential destroys a key monotonicity property. This makes the analysis
substantially more complicated. 

\textbf{Previous results.} Since the literature in the non-linear scalar field setting is vast, it is not possible to give an exhaustive list of
references. However, \cite{foster,KaMZ,KaM,reninflation,aslowroll,akessence,hayoung} are some examples of articles on this topic containing
mathematical results. See also the book \cite{coley}. 
More recent results in the spatially flat FLRW setting with matter, using a dynamical systems perspective, are to be found in \cite{ahu,alu,aauz,aauo}. 
We refer the interested reader to these articles as a starting point for entering the literature. The goal of many of the mathematical studies
is to model the accelerated expansion of the universe. A natural second step, once the future asymptotics are understood, is then to prove future
global non-linear stability of the corresponding solutions. Some examples of this are given by \cite{HaR,RinInv,RinPL,s,stab,FSF,LaI}. On the other
hand, to the best of our knowledge, the asymptotics in the direction of the singularity in the anisotropic setting have not been analysed in detail,
and this is the main purpose of the present article. Combining the detailed analysis with previous future global non-linear stability results and
a recent result concerning stable big bang formation then yields a large class of solutions that are past and future globally non-linearly stable.

\textbf{Results, outline of the introduction.} In order to derive the regularity propeties of the map $F$, we first need to endow isometry classes of data
on the singularity with a smooth structure and similarly for isometry classes of developments. This requires the introduction of a substantial amount of
terminology. In Section~\ref{ssection:initialdata}, we introduce the notion of regular Bianchi class A initial data. We also discuss corresponding
symmetry types; isotropic, locally rotationally symmetric etc. Fixing not only the Bianchi type, but also a symmetry type, turns out to be crucial
in order for it to be possible to endow the corresponding set of isometry classes of initial data with, say, fixed mean curvature with a natural
smooth structure. In Section~\ref{section:data on the singularity}, we introduce a notion of data on the singularity in the Bianchi class A setting,
see Definition~\ref{def:ndvacidonbbssh}. This notion is of course more restrictive than the one introduced in \cite{RinQC}, since it concerns a
spatially homogeneous setting. On the other hand, it is also more general in that we here do not require the eigenvalues of the limit of the expansion
normalised Weingarten map to be distinct. Moreover, we include data corresponding to Cauchy horizons as one possibility in
Definition~\ref{def:ndvacidonbbssh}. Again, we introduce symmetry types in this setting and notation for the relevant sets of initial data on the
singularity. In Section~\ref{ssection:developments} we turn to developments. In the setting of interest here, there are natural foliations of
the developments of Bianchi class A initial data. They are both Gaussian (the curves with constant space-coordinate parametrise unit timelike
geodesics orthogonal to the leaves of the foliation) and CMC (constant mean curvature). It is convenient to work with these developments. We
introduce them and the relevant terminology in Section~\ref{ssection:developments}. We also prove existence of developments, given regular initial
data and deduce their basic properties. In Subsection~\ref{ssection:dev ind data on sing}, we clarify what it means for a development to induce
data on the singularity; see Definition~\ref{def:ind data on sing}. We also verify that the developments of isometric regular initial data induce
isometric data on the singularity (assuming that they induce data on the singularity). In Theorem~\ref{thm:dataonsingtosolution} we prove that, under
suitable assumptions concerning the potential, data on the singularity give rise to unique developments. Moreover, we verify that isometric data on the
singularity give rise to isometeric developments. Before analysing the asymptotics in the direction of the singularity, it is important to keep in
mind that Bianchi type IX solutions are special in that
they include a large range of dynamics: solutions can induce data on the singularity; exhibit chaotic oscillations in the direction of the singularity;
have Cauchy horizons through which the solution can be extended in inequivalent ways; expand both to the future and the past (as in de Sitter space);
and have a static geometry (as in the Einstein static universe). If one is interested in studying cosmological singularities, the latter two
possibilities have to be excluded in some way. In Subsection~\ref{ssection:Bianchi type IX setting}, we introduce conditions that serve this purpose.
In Subsection~\ref{ssection:iso class dev}, we are then in a position to introduce appropriate terminology for isometry classes of developments. 

In Section~\ref{section:parametrising iso class dev}, we turn to the problem of endowing the set of isometry classes of developments with a smooth
structure. When doing so, we always focus on the simply connected setting; due to the form of the isometries induced on developments by isometries
of data, see Propositions~\ref{prop:iso id to iso sol} and \ref{prop:iso idos to iso dev}, one can always take quotients later. Next, a natural starting
point is to consider isometry classes of simply connected regular initial data of a fixed Bianchi type $\mfT$ (see Table~\ref{table:bianchiA}), fixed
symmetry type $\mfs$ (isotropic, locally rotationally symmetric etc.), fixed
mean curvature $\vartheta$ and corresponding to a potential $V$. We denote this set by ${}^{\rosc}\mfB_\mfT^\mfs[V](\vartheta)$ (in practice, we also
exclude certain exceptional cases, such as initial data for Minkowski space). This set can be endowed with a smooth structure; see
Section~\ref{ssection:param solns} for details. If, for a given Bianchi type, all developments have the property that the mean curvature is strictly
monotonically decaying, then the natural map from ${}^{\rosc}\mfB_\mfT^\mfs[V](\vartheta)$ to ${}^{\rosc}\mfD_\mfT^\mfs[V]$ (the set of isometry classes of
developments corresponding to simply connected regular initial data of a fixed Bianchi type $\mfT$, fixed symmetry type $\mfs$ and corresponding to a
potential $V$) is an injection, say $\iota_\vartheta$. Whatever smooth structure one wishes to endow ${}^{\rosc}\mfD_\mfT^\mfs[V]$ with, it would be desirable
if this injection were a smooth diffeomorphism onto its image. In fact, the union of the images of the $\iota_\vartheta$ yields ${}^{\rosc}\mfD_\mfT^\mfs[V]$.
This means that one can endow ${}^{\rosc}\mfD_\mfT^\mfs[V]$ with a smooth structure by first taking the disjoint union of the images of all the
$\iota_\vartheta$ and then taking the quotient of the union by the equivalence relation induced by the Einstein flow. Needless to say, for this to make
sense, the maps between subsets of isometry classes of regular initial data with distinct mean curvatures induced by the Einstein flow have to be
diffeomorphisms. Proving this statement does require an effort. Note also that, in favourable situations, there is one mean curvature, say $\vartheta_0$,
that is attained by all developments. In that case, ${}^{\rosc}\mfB_\mfT^\mfs[V](\vartheta_0)$ can be used to parametrise ${}^{\rosc}\mfD_\mfT^\mfs[V]$.
However, for general potentials, there is no such $\vartheta_0$, and the more complicated construction mentioned above becomes necessary. There are additional
complications in the Bianchi type IX and isotropic Bianchi type I cases. In particular, the above discussion does not quite apply; we refer the reader to
Section~\ref{section:parametrising iso class dev} for the details. 

Once we have introduced a smooth structure on the isometry classes of developments and isometry classes of data on the singularity, we can discuss the
regularity of the map from isometry classes of developments (that induce data on the singularity) to isometry classes of data on the singularity. In fact,
in Section~\ref{section:map iso dev to sdata}, we prove that this map is a diffeomorphism for several choices of Bianchi and symmetry type. Up to this point,
one could naively hope to generalise the results to much more general classes of spacetimes (with less or even no symmetry). However, it would be optimistic
to hope to obtain a global picture of the dynamics in general. In the Bianchi class A setting, we discuss the global dynamics in Section~\ref{ssection:asdirofsing}.
To begin with, we note that there is a dichotomy in the direction of the singularity. Either the solution is \textit{vacuum dominated} in the sense that
$\phi_t/\theta$ converges to zero (where $t$ denotes the Gaussian time coordinate, i.e. it measures proper time to the singularity, and $\theta$ denotes
the mean curvature of the constant-$t$ hypersurfaces)  or \textit{matter dominated} in the sense that $\phi_t/\theta$ converges to a non-zero limit in the
direction of the
singularity. The first result concerning the asymptotics is then contained in Theorem~\ref{thm:dev inducing data on the sing}. The conclusion is roughly
speaking the following: Ignoring isotropic Bianchi type I solutions for the moment, the remaining solutions induce data on the singularity unless they
are vacuum dominated, of Bianchi type VIII or IX and generic (in the sense that they are neither isotropic nor locally rotationally symmetric). Moreover,
the developments that do not induce data on the singularity are oscillatory. As a corollary, we can, e.g., conclude that for many choices of Bianchi 
and symmetry type, all developments induce data on the singularity, so that the map from isometry classes of developments to isometry classes of data on
the singularity is a global diffeomorphism. In the remaining cases, it is sufficient to remove the developments with oscillatory singularities in order
to obtain a diffeomorphism.

The isotropic Bianchi type I setting is different and requires a separate treatment. One issue is the fact that there are isotropic Bianchi type I
solutions that do not have crushing singularities. However, even excluding those, there is an additional complication. In order to explain it, note first
that in the isotropic Bianchi type I setting, we impose stricter conditions on the potential, so that there is a range of mean curvatures that is attained
by the CMC foliations of all the developments that have a crushing singularity. This means that there is a $\vartheta_0$ such that the isometry classes of
simply connected isotropic Bianchi type I developments with a crusing singularity and corresponding to a potential $V$, say
${}^{\rosc}\mfD_{\mrI,\roc}^{\iso}[V]$, can actually be identified with ${}^{\rosc}\mfB_{\mrI}^{\iso}[V](\vartheta_0)$. Naively, one would expect the Einstein flow
to define a diffeomorphism from ${}^{\rosc}\mfB_{\mrI}^{\iso}[V](\vartheta_0)$
to the corresponding set of isometry classes of initial data on the singularity, say ${}^{\rosc}\mfS_{\mrI}^{\iso}$. However, there is a problem: these two sets
typically do not even have the same topology. To begin with ${}^{\rosc}\mfS_{\mrI}^{\iso}$ is diffeomorphic to two copies of $\ro$. If the potential is bounded,
the same is true of ${}^{\rosc}\mfB_{\mrI}^{\iso}[V](\vartheta_0)$. In that case, one can therefore hope that all developments induce data on the singularity
and that the natural map from ${}^{\rosc}\mfB_{\mrI}^{\iso}[V](\vartheta_0)$ to ${}^{\rosc}\mfS_{\mrI}^{\iso}$ is a diffeomorphism. This is exactly what happens.
On the other hand, if the potential is bounded in one direction and unbounded in the other, ${}^{\rosc}\mfB_{\mrI}^{\iso}[V](\vartheta_0)$ is diffeomorphic to
$\ro$. There must thus be an exceptional solution that does not induce data on the singularity. Under suitable conditions on the potential, that is
exactly what happens. Finally, if the potential is unbounded in both directions, then ${}^{\rosc}\mfB_{\mrI}^{\iso}[V](\vartheta_0)$ is diffeomorphic to
$\sn{1}$. There must thus be two exceptional solutions that do not induce data on the singularity. Under suitable conditions on the potential, that is
exactly what happens. In this sense, we obtain conclusions concerning the dynamics by merely studying the topologies of the set of isometry classes
of data on the singularity and the set of isometry classes of developments. 

In Section~\ref{section:as sing k minus one intro}, we turn to the spatially homogeneous and isotropic negative spatial scalar curvature setting; i.e.,
the $k=-1$ FLRW setting. The motivation for doing this is that we wish to cover all the FLRW settings. We begin by defining the relevant notion of
regular initial data. In this case there is, in analogy with the isotropic Bianchi type I setting, a notion of trivial regular initial data. These are
the data that give rise to the Milne model or analogous solutions with a positive cosmological constant. In
Proposition~\ref{prop:Milne lambda as k minus one}, we prove that for every $\phi_\infty$, there is a solution such that $\phi(t)\rightarrow\phi_\infty$
and such that the geometry asymptotes to a vacuum solution with cosmological constant $\Lambda=V(\phi_\infty)$ (here we assume $V\geq 0$). It is important
to note that these solutions do not induce data on the singularity. In fact, $\bS/\theta^2$ does not converge to zero for these solutions, where
$\bS$ denotes the scalar curvature of the leaves of spatial homogeneity. In Definitions~\ref{def:ndvacidonbbssh k minus one} and
\ref{def:k eq minus one dev}, we then introduce the notions of data on the singularity and development of regular initial data. In
Definition~\ref{def:ind data on sing k negative} we clarify what it means for developments to induce data on the singularity and in
Proposition~\ref{prop:dos ind dev k minus one}, we verify that, given data on the singularity, there are unique developments inducing the
prescribed initial data. Next, in Proposition~\ref{prop:two outcomes intro k minus one}, we verify that developments either induce data on the
singularity or have Milne like asymptotics (or the analogous asymptotics with a positive cosmological constant). Finally, in
Proposition~\ref{prop:generic k minus one intro}, we verify that the latter case is Lebesgue and Baire non-generic.

In Section~\ref{ssection:futureasymptotics}, we then turn to the problem of deriving future asymptotics. In the case of an exponential potential,
this has already been done; see \cite{RinPL} and references cited therein. Under suitable assumptions on the potential, in particular that it has a
strictly positive lower bound,
we derive detailed asymptotics for generic Bianchi class A solutions in Proposition~\ref{prop:futureassteponeintro}. However, we exclude Bianchi
type IX solutions, since such solutions can both expand and contract in the future direction, and isolating the boundary between the two possibilities
is a subtle issue we do not sort out in this article. Moreover, we prove that the corresponding spatially locally homogeneous and spatially closed
solutions are future globally non-linearly stable in Corollary~\ref{cor:stabilitylb}. In Proposition~\ref{prop:generic k minus one intro exp},
we carry out an analogous analysis in the $k=-1$ FLRW setting.

In Section~\ref{ssection:sphom} we give a motivation of why we focus on Bianchi class A solutions. Finally, in Section~\ref{section:outline article},
we give an outline of the article. Section~\ref{section:acknoledgements} contains the acknowledgements.   

\section{Regular initial data}\label{ssection:initialdata}
In the present article, we mainly restrict our attention to solutions arising from the following type of initial data. 
\begin{definition}\label{def:Bianchiid}
  \textit{Bianchi class A initial data for the Einstein non-linear scalar field equations},
  \index{Bianchi class A!Initial data}%
  \index{Bianchi class A!Regular initial data}%
  \index{Einstein non-linear scalar field equations!Bianchi class A initial data}%
  \index{Initial data!Regular!Bianchi class A}%
  \index{Initial data!Bianchi class A}%
  \index{Initial data!Regular}%
  with potential $V\in C^{\infty}(\rn{})$, consist of the
  following: a connected $3$-dimensional unimodular Lie group $G$; a left invariant metric $\bge$ on $G$; a left invariant symmetric covariant
  $2$-tensor field $\bk$ on $G$; and two constants $\bphi_{0}$ and $\bphi_{1}$ satisfying the \textit{constraint equations}
  \index{Constraint equations}%
  \begin{subequations}\label{seq:constraintsBA}
    \begin{align}
      \bS-|\bk|_{\bge}^{2}+(\rotr_{\bge}\bk)^{2} = & \bphi_{1}^{2}+2V(\bphi_{0}),\label{eq:ham con original}\\
      d\rotr_{\bge}\bk-\rodiv_{\bge}\bk = & 0
    \end{align}
  \end{subequations}
  (these equations are also referred to as the \textit{Hamiltonian} and \textit{momentum constraint equations} respectively).
  \index{Constraint equations!Hamiltonian}%
  \index{Constraint equations!Momentum}%
  \index{Hamiltonian constraint}%
  \index{Momentum constraint}%
  Here $\bS$ denotes the scalar curvature associated with $\bge$. The data are said to be \textit{trivial}
  \index{Initial data!Trivial}%
  \index{Trivial initial data}%
  if $\bge$ is flat, $3\bk=(\tr_{\bge}\bk)\bge$, $\bphi_1=0$ and $V'(\bphi_0)=0$. 
\end{definition}
\begin{remark}\label{remark:unimodular}
  In order to define the notion of unimodularity, let $G$ be a Lie group and $\mfg$ the associated Lie algebra. Given $X\in \mfg$, define
  $\mathrm{ad}_{X}:\mfg\rightarrow\mfg$ by $\mathrm{ad}_{X}(Y)=[X,Y]$. Let $\eta_{G}\in \mfg^{*}$ be defined by $\eta_{G}(X)=\rotr\ \mathrm{ad}_{X}$.
  Then $G$ is \textit{unimodular}
  \index{Unimodular!Lie group}%
  \index{Lie group!Unimodular}%
  if $\eta_{G}=0$ and \textit{non-unimodular}
  \index{Non-unimodular!Lie group}%
  \index{Lie group!Non-unimodular}%
  if $\eta_{G}\neq 0$.
\end{remark}
  Let $G$ be a $3$-dimensional unimodular Lie group, $\mfg$ be the corresponding Lie algebra and $\{e_{i}\}$ be a basis of $\mfg$. Define $\g^{k}_{ij}$
  by the requirement that $[e_{i},e_{j}]=\g^{k}_{ij}e_{k}$. The condition of unimodularity implies that there is a unique symmetric matrix $\bn$ with
  components $\bn^{ij}$ such that $\g_{ij}^{k}=\e_{ijl}\bn^{lk}$, where $\e_{123}=1$ and $\e_{ijk}$ is antisymmetric in all of its indices; see \cite{ketal}
  for original references and, e.g., \cite[Lemma~19.3, p.~206]{RinCauchy} for a textbook presentation. It is convenient to introduce the following
  terminology.
  \begin{definition}\label{def:comm and Wein matr}
    Let $G$ be a connected $3$-dimensional unimodular Lie group and $\{e_i\}$ be a basis of the corresponding Lie algebra $\mfg$. Then $\bn$, defined
    above, is called the \textit{commutator matrix}
    \index{Commutator matrix}%
    associated with $\{e_i\}$. If, in addition, $V\in C^{\infty}(\ro)$, $(G,\bge,\bk,\bphi_0,\bphi_1)$
    are Bianchi class A initial data for the Einstein non-linear scalar field equations and $\{e_i\}$ is orthonormal with respect to $\bge$, then the
    matrix $K$ with components $\bk(e_i,e_j)$ is called the \textit{Weingarten matrix}
    \index{Weingarten matrix}%
    associated with $\{e_i\}$. 
  \end{definition}
  \begin{remark}
    Under the assumptions of the definition, in particular, the assumption that $\{e_i\}$ is orthonormal with respect to $\bge$, (\ref{seq:constraintsBA})
    is equivalent to 
    \begin{subequations}\label{seq:compverconstraints}
      \begin{align}
        -\rotr\bn^{2}+\frac{1}{2}(\rotr\bn)^{2}
        -\rotr K^{2}+(\rotr K)^{2} = & \bphi_{1}^{2}
                                       +2V(\bphi_{0}),\label{eq:nuvhc}\\
        K\bn-\bn K = & 0.\label{eq:macom}
      \end{align}
    \end{subequations}
    This statement follows from \cite[Lemma~19.13, p.~210]{RinCauchy}.      
\end{remark}
  Changing basis, if necessary, it can be ensured that $\bn$ is diagonal and has diagonal elements $\bn_i$ of exactly one of the forms listed in
  Table~\ref{table:bianchiA}. Moreover, this table yields a classification of the Lie algebras of $3$-dimensional unimodular Lie groups. We refer
  the interested reader to \cite[Lemma~19.8, p.~208]{RinCauchy} for a justification of these statements. For a discussion of corresponding simply
  connected Lie groups, see the proof of Lemma~\ref{lemma:realcond} below.  
\begin{table}
\caption{Bianchi class A.}\label{table:bianchiA}
\begin{center}
\begin{tabular}{@{}cccccc}
  Type & $\bn_{1}$ & $\bn_{2}$ & $\bn_{3}$ & \vline & $\tilde{G}$\\
\hline
I                   & 0 & 0 & 0 & \vline & $\rn{3}$\\
II                  & + & 0 & 0 & \vline & $\mathrm{Heis}$\\
V$\mathrm{I}_{0}$   & 0 & + & $-$ & \vline & $\mathrm{Sol}$\\
VI$\mathrm{I}_{0}$  & 0 & + & + & \vline & -\\
VIII                & $-$ & + & + & \vline & $\tilde{\mathrm{Sl}}(2,\rn{})$\\
IX                  & + & + & + & \vline & $\mathrm{SU}(2)$
\end{tabular}
\end{center}
\end{table}
\begin{definition}\label{def:Bianchi Type}
  Let $G$ be a $3$-dimensional unimodular Lie group. Then, using the notation introduced prior to the statement of the definition, $G$ is said to
  be of \textit{Bianchi type}
  \index{Bianchi type!Lie group}%
  \index{Lie group!Bianchi type}%
  \index{Bianchi class A!Type}%
  \index{Bianchi class A!Type!Lie group}%
  $\mfT\in \{\mrI,\mrII,\mrVIz,\mrVIIz,\allowbreak \mrVIII,\mrIX\}$ if there is a frame $\{e_i\}$ of $\mfg$ such that
  the diagonal elements of $\bn$ have the corresponding signs indicated in Table~\ref{table:bianchiA}. 
\end{definition}
Beyond the left invariance inherent in Definition~\ref{def:Bianchiid}, some initial data have additional symmetries. 
\begin{definition}\label{def:isotropic data}
  Let $V\in C^\infty(\ro)$ and $(G,\bge,\bk,\bphi_{0},\bphi_{1})$ be Bianchi class A initial data for the Einstein non-linear scalar field equations. If the
  Ricci tensor of $\bge$ is a multiple of $\bge$ and if $\bk$ is a multiple of $\bge$, then the initial data are said to be \textit{isotropic}.
  \index{Initial data!Regular!Isotropic}%
  \index{Initial data!Isotropic}%
  \index{Isotropic!Initial data!Regular}%
\end{definition}
\begin{remark}\label{remark:iso VIIz data are BI}
  Isotropic Bianchi type VII${}_0$ initial data are of Bianchi type I; see Lemma~\ref{lemma:LRS BVIIz is BI} below. For this reason, we do not consider
  such data in what follows. 
\end{remark}
Next, we introduce the notion of local rotational symmetry. 
\begin{definition}\label{def:LRS}
  Let $V\in C^\infty(\ro)$ and $(G,\bge,\bk,\bphi_{0},\bphi_{1})$ be Bianchi class A initial data for the Einstein non-linear scalar field equations. Then the
  data are said to be \textit{locally rotationally symmetric}
  \index{Initial data!Regular!Locally rotationally symmetric}%
  \index{Initial data!Locally rotationally symmetric}%
  \index{Locally rotationally symmetric!Initial data!Regular}%
  or \textit{LRS}
  \index{Initial data!Regular!LRS}%
  \index{Initial data!LRS}%
  \index{LRS!Initial data!Regular}%
  if they are not isotropic and if there is a family of Lie algebra isomorphisms
  $\Psi_{t}$, $t\in\rn{}$, and an orthonormal frame $\{e_{i}\}$ of $\mfg$ with respect to $\bge$ such that
  \begin{equation}\label{eq:Psit definition}
    \Psi_{t}e_{1}=e_{1},\ \ \
    \Psi_{t}e_{2}=\cos(t)e_{2}+\sin(t)e_{3},\ \ \
    \Psi_{t}e_{3}=-\sin(t)e_{2}+\cos(t)e_{3},
  \end{equation}
  and such that for all $X,Y\in\mfg$ and $t\in\ro$, the relation $\bk(\Psi_{t}X,\Psi_{t}Y)=\bk(X,Y)$ holds.  
\end{definition}
\begin{remark}\label{remark:LRSisometries}
  In the simply connected setting, the $\Psi_t$ arise from a family of Lie group isomorphisms, preserving the initial data; cf.
  \cite[Theorem~20.19, p.~530]{Lee}. 
\end{remark}
\begin{remark}\label{remark:LRS VIIz data are BI}
  LRS Bianchi type VII${}_0$ initial data are of Bianchi type I; see Lemma~\ref{lemma:LRS BVIIz is BI} below. For this reason, we do not consider
  such data in what follows. 
\end{remark}
Local rotational symmetry appears in Bianchi types I, II, VII${}_0$, VIII and IX. However, there is a related symmetry in the case of Bianchi type
VI${}_0$.
\begin{definition}\label{def:permutation symmetry}
  Let $V\in C^\infty(\ro)$ and $(G,\bge,\bk,\bphi_{0},\bphi_{1})$ be Bianchi type VI${}_0$ initial data for the Einstein non-linear scalar field equations.
  Then the data are said to be \textit{permutation symmetric}
  \index{Initial data!Regular!Permutation symmetric}%
  \index{Initial data!Permutation symmetric}%
  \index{Permutation symmetric!Initial data!Regular}%
  if there is a Lie algebra isomorphism $\Psi$ and an orthonormal frame $\{e_{i}\}$ of $\mfg$ with respect to
  $\bge$ such that
  \begin{equation}\label{eq:Psi definition}
    \Psi e_{1}=e_{1},\ \ \
    \Psi e_{2}=e_3,\ \ \
    \Psi e_{3}=e_2
  \end{equation}
  and such that for all $X,Y\in\mfg$, the relation $\bk(\Psi X,\Psi Y)=\bk(X,Y)$ holds.  
\end{definition}
It is convenient to introduce notation for the different types of initial data
\begin{definition}\label{def:BV}
  Let $V\in C^{\infty}(\ro)$. Then $\mB[V]$
  \index{$\a$Aa@Notation!Sets of regular initial data!mBV@$\mB[V]$}%
  denotes the set of all Bianchi class A initial data for the Einstein non-linear scalar field equations with
  potential $V$ which are  non-trivial and are neither of isotropic nor of LRS Bianchi type VII${}_0$. The corresponding
  subset of simply connected initial data is denoted ${}^{\rosc}\mB[V]$.
  \index{$\a$Aa@Notation!Sets of regular initial data!scmBV@${}^{\rosc}\mB[V]$}%
  If $\mfI\in \mB[V]$ is such that the Lie group of $\mfI$ is of Bianchi
  type $\mfT$, then the initial data $\mfI$ are said to be of \textit{Bianchi type} $\mfT$.
  \index{Bianchi type!Initial data}%
  \index{Initial data!Bianchi type}%
  The set of $\mfI\in \mB[V]$ of Bianchi type $\mfT$ is denoted
  $\mB_{\mfT}[V]$.
  \index{$\a$Aa@Notation!Sets of regular initial data!mBTV@$\mB_\mfT[V]$}%
  The set ${}^{\rosc}\mB_{\mfT}[V]$
  \index{$\a$Aa@Notation!Sets of regular initial data!scmBTV@${}^{\rosc}\mB_\mfT[V]$}%
  is defined similarly. In all of these definitions, the argument $[V]$ is omitted in case the potential $V$
  is clear from the context.
\end{definition}
\begin{remark}\label{remark:dev of trivial id}
  Consider trivial initial data $\mfI$; see Definition~\ref{def:Bianchiid}. Then (\ref{eq:ham con original}) implies that
  $V(\bphi_0)=(\tr_{\bge}\bk)^2/3\geq 0$. If $\tr_{\bge}\bk=0$, then $\mfI$ are initial data for a quotient of Minkowski space.
  If $\pm\tr_{\bge}\bk>0$, then $\mfI$ give rise to a development $(M_{\rodS},g_{\rodS})$, where $M_{\rodS}=G_0\times\ro$, $G_0$ is
  a Bianchi type I Lie group, and
  \index{Trivial initial data!Developments}%
  \begin{equation}\label{eq:g dS}
    g_{\rodS}=-dt\otimes dt+e^{\pm 2Ht}\bge_0.
  \end{equation}
  Here $\bge_0$ is a flat metric on $G_0$ and $H=[V(\bphi_0)/3]^{1/2}$. In particular, $(M_{\rodS},g_{\rodS})$ is a quotient of a part of de Sitter space.
  Moreover, it is not a particularly natural model for the asymptotics in the contracting direction (it is, e.g., unstable in the contracting
  direction). To summarise: Minkowski space is not interesting in a cosmological setting, and $(M_{\rodS},g_{\rodS})$ is an unnatural model when analysing
  the asymptotics in the direction of the singularity. Moreover, due to their additional symmetry, the corresponding initial data sets cause problems
  in the regularity analysis of the set of isometry classes of initial data. For these reasons, we exclude trivial initial data. On the other, it should
  be mentioned that $(M_{\rodS},g_{\rodS})$ is important in the study of inflation, which takes place in intermediate regimes distinct from the singularity;
  cf., e.g., \cite{aauo}.
\end{remark}
In order for the sets of isometry classes of initial data to be smooth manifolds, it is necessary to focus on one symmetry class at a time. It is
therefore important to introduce the following terminology.
\begin{definition}
  Let $V\in C^{\infty}(\ro)$ and fix a Bianchi class A type $\mfT$. The sets of isotropic, permutation symmetric and LRS elements of $\mB_{\mfT}[V]$
  are denoted $\mB^{\iso}_{\mfT}[V]$,
  \index{$\a$Aa@Notation!Sets of regular initial data!mBisoTV@$\mB_\mfT^{\iso}[V]$}%
  $\mB^{\roper}_{\mfT}[V]$
  \index{$\a$Aa@Notation!Sets of regular initial data!mBperTV@$\mB_\mfT^{\roper}[V]$}%
  and $\mB^{\roLRS}_{\mfT}[V]$
  \index{$\a$Aa@Notation!Sets of regular initial data!mBLRSTV@$\mB_\mfT^{\roLRS}[V]$}%
  respectively. The set of elements of $\mB_{\mfT}[V]$ which are
  neither isotropic, permutation symmetric nor LRS
  is denoted $\mB^{\rogen}_{\mfT}[V]$.
  \index{$\a$Aa@Notation!Sets of regular initial data!mBgenTV@$\mB_\mfT^{\rogen}[V]$}%
  For $\mfs\in \{\iso,\roLRS,\roper,\rogen\}$, ${}^{\rosc}\mB_{\mfT}^{\mfs}[V]$
  \index{$\a$Aa@Notation!Sets of regular initial data!scmBsTV@${}^{\rosc}\mB_\mfT^{\mfs}[V]$}%
  denotes the set of simply connected elements
  of $\mB_{\mfT}^{\mfs}[V]$. Moreover, ${}^{\rosc}\mfB_{\mfT}^{\mfs}[V]$
  \index{$\a$Aa@Notation!Sets of isometry classes of regular initial data!scmfBsTV@${}^{\rosc}\mfB_{\mfT}^{\mfs}[V]$}%
  denotes the set of isometry classes of elements in ${}^{\rosc}\mB_{\mfT}^{\mfs}[V]$. 
\end{definition}
\begin{remark}\label{remark:type and sym class}
  When we write $\mB_{\mfT}^{\mfs}[V]$ etc., we take it for granted that the Bianchi type $\mfT$ is such that the symmetry
  type $\mfs$ makes sense; e.g., if $\mfs=\roper$, then $\mfT=\mrVIz$. 
\end{remark}
\begin{remark}
  If $V\in C^\infty(\ro)$, the sets ${}^{\rosc}\mfB_{\mfT}^{\mfs}[V]$ can be parametrised by smooth manifolds; see
  Lemma~\ref{lemma:sc mfB mfT mfs param} and Remark~\ref{remark:sc mfB mfT mfs param}. In what follows we therefore assume these sets
  to have been endowed with the corresponding smooth structures.
\end{remark}

\section{Data on the singularity}\label{section:data on the singularity}
Next, we introduce a geometric notion of initial data on a singularity analogous to but more general than that given in
\cite[Definition~10, p.~1510]{RinQCSymm}. 
\begin{definition}\label{def:ndvacidonbbssh}
  Let $G$ be a $3$-dimensional unimodular Lie group, $\msH$ be a left invariant Riemannian metric on $G$, $\msK$ be a left invariant
  $(1,1)$-tensor field on $G$ and $(\Phi_{0},\Phi_{1})\in\rn{2}$. Then $(G,\msH,\msK,\Phi_{0},\Phi_{1})$ are
  \textit{quiescent Bianchi class A initial data on the singularity for the Einstein non-linear scalar field equations}
  \index{Bianchi class A!Initial data}%
  \index{Bianchi class A!Initial data on the singularity}%
  \index{Initial data!On singularity}%
  if
  \begin{enumerate}
  \item $\tr\msK=1$ and $\msK$ is symmetric with respect to $\msH$.
  \item $\tr\msK^{2}+\Phi_{1}^{2}=1$ and $\mathrm{div}_{\msH}\msK=0$.
  \item In case all the eigenvalues of $\msK$ are $<1$ and there is one eigenvalue, say $p_A$, satisfying $p_A\leq 0$, then the vector subspace of
    $\mfg$, say $\mfh$, perpendicular to the eigenspace of $p_A$ is a subalgebra of $\mfg$.
  \item If $1$ is an eigenvalue of $\msK$, there is an orthonormal basis $\{e_{i}\}$ of $\mfg$ with respect to $\msH$ such that
    $\msK e_{1}=e_{1}$ and such that if $\Psi_t$ is defined by (\ref{eq:Psit definition}), then $\Psi_t$ is a Lie algebra isomorphism
    for all $t$.
  \end{enumerate}  
\end{definition}
\begin{remark}
  The eigenvalues of $\msK$ are constants due to the left invariance of $\msK$.
\end{remark}
\begin{remark}
  If all the eigenvalues, say $p_A$, are $<1$ and there is one eigenvalue, say $p_1$, satisfying $p_1\leq 0$, then, due to the requirement that $p_A<1$
  and the fact that the $p_A$ sum up to $1$, it follows that $p_2,p_3>0$, so that the eigenspace corresponding to $p_1$ is $1$-dimensional and $\mfh$ is
  $2$-dimensional. 
\end{remark}
\begin{definition}\label{def:iso of data on sing}
  Let $\mfI_\a=(G_\a,\msH_\a,\msK_\a,\Phi_{\a,1},\Phi_{\a,0})$, $\a\in\{0,1\}$, be quiescent Bianchi class A initial data on the singularity for the
  Einstein non-linear scalar field equations. Then $\chi:G_0\rightarrow G_1$ is said to be an \textit{isometry}
  \index{Isometry!Initial data on singularity}%
  \index{Initial data!On singularity!Isometry of}%
  from $\mfI_0$ to $\mfI_1$ if $\chi$
  is a diffeomorphism such that
  \[
  \chi^*\msH_1=\msH_0,\ \ \
  \chi^*\msK_1=\msK_0,\ \ \
  \chi^*\Phi_{1,0}=\Phi_{0,0},\ \ \
  \chi^*\Phi_{1,1}=\Phi_{0,1}.
  \]
  Here $\chi^*\msK_1:=(\chi^{-1})_*\circ\msK_1\circ\chi_*$.
\end{definition}
It is of interest to introduce terminology for symmetry types of data on the singularity.
\begin{definition}\label{def:isotropic}
  Let $(G,\msH,\msK,\Phi_{0},\Phi_{1})$ be quiescent Bianchi class A initial data on the singularity for the Einstein non-linear scalar field equations.
  If $\msK=\mathrm{Id}/3$ and $\msH$ has constant curvature, then the initial data on the singularity are said to be \textit{isotropic}.
  \index{Initial data!On singularity!Isotropic}%
  \index{Initial data!Isotropic}%
  \index{Isotropic!Initial data!On singularity}%
  If the initial
  data on the singularity are not isotropic and if there is a family of Lie algebra isomorphisms $\Psi_{t}$,
  $t\in\rn{}$, and an orthonormal frame $\{e_{i}\}$ of $\mfg$ with respect to $\msH$ such that (\ref{eq:Psit definition}) and $\msK\Psi_t=\Psi_t\msK$
  both hold for all $t$, then the initial data on the singularity are said to be \textit{locally rotationally symmetric}
  \index{Initial data!On singularity!Locally rotationally symmetric}%
  \index{Initial data!Locally rotationally symmetric}%
  \index{Locally rotationally symmetric!Initial data!On singularity}%
  or \textit{LRS}.
  \index{Initial data!On singularity!LRS}%
  \index{Initial data!LRS}%
  \index{LRS!Initial data!On singularity}%
  If $G$ is of Bianchi type VI${}_0$ and if there is a Lie algebra isomorphism $\Psi$ and an orthonormal frame $\{e_{i}\}$ of $\mfg$ with
  respect to $\msH$ such that (\ref{eq:Psi definition}) and $\msK\Psi=\Psi\msK$ hold, then the initial data on the singularity are said to
  have a \textit{permutation symmetry}.
  \index{Initial data!On singularity!Permutation symmetric}%
  \index{Initial data!Permutation symmetric}%
  \index{Permutation symmetric!Initial data!On singularity}%
\end{definition}
\begin{remark}
  That all the eigenvalues of $\msK$ coincide is equivalent to $\msK=\mathrm{Id}/3$, since $\rotr\msK=1$.
\end{remark}
\begin{remark}
  If $\msK e_1=e_1$ (where $e_1$ is a unit vector), then the eigenvalues of $\msK$ are $1$, $0$ and $0$. This means that $\msK v=0$ for
  $v\perp e_1$. If $e_A$, $A=2,3$, are orthonormal vectors perpendicular to $e_1$, this means that $\Psi_t$ defined by (\ref{eq:Psit definition}) is
  a family of isometries such that $\Psi_t\msK=\msK\Psi_t$ for all $t$. Assuming, additionally, (as we do in case $1$ is an eigenvalue of $\msK$)
  that $\Psi_t$ is a Lie algebra isomorphism for all $t$, this means that the data on the singularity are LRS. By Theorem~\ref{thm:dataonsingtosolution}
  below, it follows that the development is LRS. 
\end{remark}
\begin{remark}\label{remark:LRS VIIz sing BI sing}
  Isotropic and LRS Bianchi type VII${}_0$ initial data on the singularity are of Bianchi type I; see Lemma~\ref{lemma:LRS VIIz is I sing} below. For
  this reason, we do not consider such data in what follows. 
\end{remark}
\begin{definition}\label{def:sets of id on singularity}
  The set of quiescent Bianchi class A initial data on the singularity for the Einstein non-linear scalar field equations which are neither of
  isotropic nor of LRS Bianchi type VII${}_0$ is denoted $\mS$.
  \index{$\a$Aa@Notation!Sets of initial data on the singularity!mS@$\mS$}%
  The corresponding set of simply connected initial data on the singularity is denoted ${}^{\rosc}\mS$.
  \index{$\a$Aa@Notation!Sets of initial data on the singularity!scmS@${}^{\rosc}\mS$}%
  Given a Bianchi class A type $\mfT$, the elements of $\mS$ and ${}^{\rosc}\mS$ of Bianchi type $\mfT$ are denoted $\mS_{\mfT}$
  \index{$\a$Aa@Notation!Sets of initial data on the singularity!mST@$\mS_\mfT$}%
  and ${}^{\rosc}\mS_{\mfT}$
  \index{$\a$Aa@Notation!Sets of initial data on the singularity!scmST@${}^{\rosc}\mS_\mfT$}%
  respectively.
  The sets of isotropic, permutation symmetric and LRS elements of $\mS_{\mfT}$ are denoted $\mS^{\iso}_{\mfT}$,
  \index{$\a$Aa@Notation!Sets of initial data on the singularity!mSisoT@$\mS_\mfT^{\iso}$}%
  $\mS^{\roper}_{\mfT}$
  \index{$\a$Aa@Notation!Sets of initial data on the singularity!mSperT@$\mS_\mfT^{\roper}$}%
  and $\mS^{\roLRS}_{\mfT}$
  \index{$\a$Aa@Notation!Sets of initial data on the singularity!mSLRST@$\mS_\mfT^{\roLRS}$}%
  respectively. Finally, the set of elements of $\mS_{\mfT}$ which are neither isotropic, permutation symmetric nor LRS
  is denoted $\mS^{\rogen}_{\mfT}$.
  \index{$\a$Aa@Notation!Sets of initial data on the singularity!mSgenT@$\mS_\mfT^{\rogen}$}%
  For $\mfs\in \{\iso,\roLRS,\roper,\rogen\}$, ${}^{\rosc}\mS_{\mfT}^{\mfs}$
  \index{$\a$Aa@Notation!Sets of initial data on the singularity!scmSsT@${}^{\rosc}\mS_\mfT^{\mfs}$}%
  denotes the set of simply connected elements of $\mS_{\mfT}^{\mfs}$. Finally, ${}^{\rosc}\mfS_{\mfT}^{\mfs}$
  \index{$\a$Aa@Notation!Sets of isometry classes of initial data on the singularity!scmfSsT@${}^{\rosc}\mfS_\mfT^{\mfs}$}%
  denotes the set of isometry classes of elements in ${}^{\rosc}\mS_{\mfT}^{\mfs}$.
\end{definition}
\begin{remark}
  Remark~\ref{remark:type and sym class} is equally relevant here. 
\end{remark}
\begin{remark}
  The sets ${}^{\rosc}\mfS_{\mfT}^{\mfs}$ can be parametrised by smooth manifolds; see Remark~\ref{remark:sc mfB mfT mfs param sing}.
  In what follows we therefore assume these sets to have been endowed with the corresponding smooth structures.
\end{remark}

\section{Developments}\label{ssection:developments}
Due to general results, we know that there is a unique maximal globally hyperbolic development associated with data as
in Definition~\ref{def:Bianchiid}; see, e.g., \cite[Lemma~23.2, p.~398]{stab} and \cite[Corollary~23.44, p.~418]{stab}. However, for the
purposes of the present discussion, it is convenient to work with a specific representation.

\begin{definition}\label{def:BianchiAdevelopment}
  Let $V\in C^{\infty}(\ro)$ and $\mfI=(G,\bar{h},\bk,\bphi_0,\bphi_1)\in\mB[V]$. A \textit{Bianchi class A non-linear scalar field development of}
  \index{Bianchi class A!Development}%
  \index{Development!Bianchi class A}%
  $\mfI$ is a triple $(M,g,\phi)$, where $M:=G\times J$, $J$ is an open interval, $\phi\in C^{\infty}(J)$,
  \begin{equation}\label{eq:gSH}
    g=-dt\otimes dt+\bge,
  \end{equation}
  \index{$\a$Aa@Notation!Time coordinates!t@$t$}%
  and $\bge$ denotes a family of left invariant Riemannian metrics on $G$. Finally, $(M,g,\phi)$ is required to satisfy the Einstein non-linear
  scalar field equations, and for some $t_{0}\in J$, the initial data induced on $G_{t_0}:=G\times \{t_{0}\}$ by $(M,g,\phi)$ and pulled back by
  $\iota_0$ are required to equal $\mfI$, where $\iota_0(p):=(p,t_0)$.

  In case the left endpoint of $J$ is $t_-$ ($t_-$ will always be assumed to equal $0$ if $t_->-\infty$) and $\theta(t)\rightarrow \infty$ as
  $t\downarrow t_-$ (where $\theta(t)$ denotes the mean curvature of $G_t$ with respect to $g$), then the development is said to have a
  \textit{crushing singularity}.
  \index{Crushing singularity}%

  If $(M,g,\phi)$ is a Bianchi class A non-linear scalar field development of isotropic/LRS/permutation symmetric initial data, then $(M,g,\phi)$ is
  said to be an \textit{isotropic/LRS/permutation symmetric development}.
  \index{Development!Isotropic}%
  \index{Isotropic!Development}%
  \index{Development!LRS}%
  \index{LRS!Development}%
  \index{Development!Permutation symmetric}%
  \index{Permutation symmetric!Development}%
\end{definition}
\begin{remark}
  In (\ref{eq:gSH}) $t$, by standard abuse of notation, denotes the time coordinate (mapping $(\bx,s)\in M$ to $s$) as well as the corresponding
  element of $J$.
\end{remark}
\begin{remark}
  We assign a Bianchi type to a Bianchi class A non-linear scalar field development according to the Bianchi type of $G$; see
  Definition~\ref{def:Bianchi Type}.
\end{remark}
Given regular initial data, there is, up to time translation, a unique corresponding maximal development.
\begin{prop}\label{prop:unique max dev}
  Let $V\in C^{\infty}(\ro)$ and $\mfI\in\mB[V]$. Then there is a unique (up to time translation) Bianchi class
  \index{Development!Existence}%
  \index{Development!Uniqueness}%
  A non-linear scalar field development of $\mfI$, say $(M,g,\phi)$ with $M=G\times J$ and $J=(t_-,t_+)$, which is maximal in the sense that
  either $t_\pm=\pm\infty$ or $|\theta|+|\phi|$ is unbounded as $t\rightarrow t_\pm$, where $\theta$ denotes the mean curvature of the constant-$t$
  hypersurfaces. Moreover, this development is globally hyperbolic. Let $\mfs\in \{\iso,\roLRS,\roper,\rogen\}$ and $\mfT$ be a Bianchi type
  consistent with $\mfs$. If $\mfI\in \mB_{\mfT}^{\mfs}[V]$ and $t\in J$, then the initial data induced on $G_t$ by $(M,g,\phi)$ also belong to
  $\mB_{\mfT}^{\mfs}[V]$. 
\end{prop}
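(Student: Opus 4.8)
\emph{Proof proposal.} The plan is to reduce the Einstein non-linear scalar field equations, restricted to the ansatz (\ref{eq:gSH}) with $\phi=\phi(t)$, to an ODE system in a fixed left invariant frame and invoke Picard--Lindel\"of. Fix a basis $\{e_i\}$ of $\mfg$ with structure constants $\g^k_{ij}$ (equivalently, commutator matrix $\bn$). A left invariant metric on $G$ is then encoded by the positive definite symmetric matrix $(\bge_{ij})$, $\bge_{ij}:=\bge(e_i,e_j)$, a left invariant symmetric $2$-tensor by a symmetric matrix $(\bk_{ij})$, and the Ricci tensor of $\bge$ is a fixed smooth function of $(\bge_{ij})$, $(\bge^{ij})$ and the $\g^k_{ij}$ (rational, with a power of $\det\bge$ in the denominator). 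Writing out $G=T$ for the ansatz, the spatial part $G_{ij}=T_{ij}$, together with the relation $\partial_t\bge_{ij}=-2\bk_{ij}$ defining the second fundamental form, and the equation $\Box_g\phi=V'\circ\phi$, become a first order system, schematically
\[
\partial_t\bge=-2\bk,\qquad \partial_t\bk=F(\bge,\bk,\phi,\dot\phi),\qquad \ddot\phi=-\theta\dot\phi-V'(\phi),
\]
with $\theta$ the mean curvature (so that $\partial_t\ln\det\bge=2\theta$) and $F$ smooth on the open set $\mathcal O$ of $\mathrm{Sym}(3)\times\mathrm{Sym}(3)\times\ro^2$ on which $\bge$ is positive definite; the remaining components $G_{0\mu}=T_{0\mu}$ are precisely the constraints (\ref{seq:compverconstraints}). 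This reduction is classical; cf.\ \cite{RinCauchy}. By Picard--Lindel\"of, for any fixed $t_0$ there is a unique maximal solution on an interval $J=(t_-,t_+)$ taking the values of $\mfI$ at $t_0$; putting $M:=G\times J$ and reading off $g$, $\bge$, $\phi$ gives a triple of the required form. It satisfies $G=T$ once the constraints hold on every slice, which they do: they hold on $G_{t_0}$ by hypothesis, and since the scalar field equation gives $\nabla^\mu T_{\mu\nu}=0$, the twice-contracted Bianchi identity forces the constraint quantities to satisfy a linear homogeneous ODE along the flow, hence to vanish identically. So $(M,g,\phi)$ is a Bianchi class A non-linear scalar field development of $\mfI$; it is globally hyperbolic with the $G_t$ as Cauchy hypersurfaces, since $G$ with a left invariant metric is complete, the $\bge(t)$ are mutually uniformly equivalent on compact $t$-intervals, and for metrics $-dt\otimes dt+\bge(t)$ this is standard (cf.\ \cite{stab}).

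For the maximality criterion --- the step I expect to be the main obstacle --- suppose $t_+<\infty$ and that $|\theta|+|\phi|$ stays bounded on $[t_0,t_+)$; I would derive a contradiction by showing the solution stays in a compact subset of $\mathcal O$. First, the scalar field equation with $\theta$ and $V'(\phi)$ bounded gives, via Gr\"onwall, a bound on $\dot\phi$ on $[t_0,t_+)$. Next, the Raychaudhuri identity for the (geodesic, irrotational) foliation, which becomes $\dot\theta=-|\bk|_{\bge}^2-\dot\phi^2+V(\phi)$ after using $G=T$, integrates --- finite interval, $\theta$ and $V(\phi)$ bounded --- to $|\bk|_{\bge}^2\in L^1([t_0,t_+))$, hence $|\bk|_{\bge}\in L^1$. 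Since $|\partial_t\bge|\lesssim|\bk|_{\bge}\,\|\bge\|$, this bounds $\ln\|\bge\|$ and $\ln\|\bge^{-1}\|$ on $[t_0,t_+)$, so $\bge$ stays in a compact subset of the positive definite cone. With $\bge$ so controlled, $\ovRic[\bge]$ and the matter terms in $F$ are bounded, whence $\partial_t\bk$ is linear in $\bk$ with bounded coefficients plus a bounded inhomogeneity, and Gr\"onwall bounds $\bk$. Thus $(\bge,\bk,\phi,\dot\phi)$ is confined to a compact subset of $\mathcal O$, contradicting maximality; the case $t_-$ finite is symmetric. Uniqueness up to time translation then follows from Picard--Lindel\"of: any Bianchi class A non-linear scalar field development of $\mfI$ solves the above ODE system (its constraints hold by definition), so after a time translation placing its base point at $t_0$ it coincides with the maximal solution on the overlap of time intervals, and maximality forces the intervals and solutions to coincide.

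Finally, fix $\mfs\in\{\iso,\roLRS,\roper,\rogen\}$ and a consistent Bianchi type $\mfT$ and let $\mfI\in\mB_{\mfT}^{\mfs}[V]$. The group $G$ is common to every slice $G_t$, so the induced data all have Bianchi type $\mfT$ (and iso/LRS type $\mathrm{VII}_0$ does not occur, $\mfT$ being consistent with $\mfs$). The symmetry type is controlled by equivariance of the ODE system: its right-hand side is natural under Lie algebra automorphisms $\Psi$ of $\mfg$, so if $(\bge(t),\bk(t),\phi(t))$ is a solution then so is $(\Psi^*\bge(t),\Psi^*\bk(t),\phi(t))$. Hence, if the data on some $G_{t_1}$ were invariant under a symmetry of the type defining $\mfs$ --- a rotation family $\Psi_\tau$ as in (\ref{eq:Psit definition}), the permutation of (\ref{eq:Psi definition}), or, in the isotropic case, the full rotation group --- then by ODE uniqueness run from $t_1$ the data on every $G_t$, in particular $G_{t_0}$, would be invariant under the same automorphism(s); running the argument from $t_0$ gives the converse, so the data on $G_t$ has symmetry type exactly $\mfs$. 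It remains to check non-triviality: if the data on some $G_{t_1}$ were trivial, then by Remark~\ref{remark:dev of trivial id} its (unique) development --- namely $(M,g,\phi)$ --- would be a quotient of Minkowski or de Sitter space, whose induced data on \emph{every} slice is trivial, forcing $\mfI$ to be trivial, a contradiction. Hence the data induced on $G_t$ lies in $\mB_{\mfT}^{\mfs}[V]$. To summarise, the genuinely technical point is the blow-up criterion of the second paragraph; local existence and uniqueness, propagation of constraints, global hyperbolicity, and the equivariance argument are all routine given the cited reduction.
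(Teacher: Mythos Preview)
Your overall strategy is sound and close to the paper's: reduce to an ODE, invoke Picard--Lindel\"of, propagate constraints, check global hyperbolicity, and handle symmetry types. There is, however, a gap in the blow-up argument. In a fixed left invariant (non--Fermi--Walker) frame, the ADM evolution of the second fundamental form contains the quadratic term $-2\bk_{il}\bge^{lm}\bk_{mj}$ (equivalently $2K_{il}K^l_j$ in the standard ADM system), so $\partial_t\bk$ is \emph{not} linear in $\bk$ as you claim, and Gr\"onwall as written does not close; a matrix Riccati term of this type can in principle drive finite-time blow-up. The fix is already in your hands: from Raychaudhuri you have $\int_{t_0}^{t_+}|\bk|_{\bge}^2<\infty$; writing $u:=|\bk|_{\bge}$, the evolution gives $\dot u\le C(1+u^2)$ once $\bge,\phi,\dot\phi$ are controlled, and integrating from $t_0$ yields
\[
u(t)\le u(t_0)+C(t_+-t_0)+C\int_{t_0}^{t_+}u^2\,ds<\infty,
\]
so $\bk$ is bounded after all.

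By contrast, the paper works from the outset in the diagonal orthonormal Fermi--Walker frame of Lemma~\ref{lemma:FW frame eqns}, where the $\sigma$-equation \emph{is} linear and the Hamiltonian constraint~(\ref{eq:hamconfin}) gives \emph{pointwise} bounds on $\sigma_{ij}$ and $\phi_t$ directly from boundedness of $\theta$ and $\phi$ for all non-IX types (since $-\bS\ge0$ there); Bianchi IX is then handled separately via $(n_1n_2n_3)_t=-\theta\,n_1n_2n_3$ together with Lemma~\ref{lemma: bS low bd}. Your Raychaudhuri route, once corrected, is Bianchi-type independent, which is a genuine advantage; the paper's route trades that for avoiding the integration/bootstrap. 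For symmetry preservation the paper computes directly in the diagonal frame (deriving homogeneous linear ODEs for quantities like $\sigma_{22}-\sigma_{33}$ and $n_{22}-n_{33}$), which amounts to the same thing as your equivariance argument but is less conceptual.
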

\begin{remark}\label{remark:improved dichotomy}
  In case $V\geq 0$, the development is maximal in the sense that either $t_\pm=\pm\infty$ or $|\theta|$ is unbounded as $t\rightarrow t_\pm$.
\end{remark}
\begin{remark}
  In most of this article, we are interested in developments with $t_->-\infty$. In this case, we fix the time translation ambiguity by
  demanding that $t_-=0$. 
\end{remark}
\begin{proof}
  See Subsection~\ref{ssection:ex of dev step one} below.
\end{proof}
\begin{definition}
  Let $V\in C^{\infty}(\ro)$ and $\mfI\in\mB[V]$. Then the unique development obtained in Proposition~\ref{prop:unique max dev} is
  denoted $\md[V](\mfI)$
  \index{$\a$Aa@Notation!Sets of developments!mDVmfI@$\mD[V](\mfI)$}%
\end{definition}
The following observation will be of interest.
\begin{prop}\label{prop:iso id to iso sol}
  Let $\mfI_\a\in\mB[V]$, $\a=0,1$. Assume $\chi:G_0\rightarrow G_1$ to be an isometry from $\mfI_0$ to $\mfI_1$. Fix the time translation
  ambiguity in $\md[V](\mfI_\a)$ by demanding that the intial data $\mfI_\a$ be induced at the $t_0$ hypersurface, for some fixed $t_0\in\ro$. Then
  $\chi\times\roId$ is an isometry from $\md[V](\mfI_0)$ to $\md[V](\mfI_1)$ (in particular, the existence interval is the same for the two
  developments).  
\end{prop}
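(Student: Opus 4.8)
The plan is to reduce the claim to the uniqueness statement in Proposition~\ref{prop:unique max dev}. Observe first that $\chi\times\roId:G_0\times J\to G_1\times J$ is a diffeomorphism, and since $\chi$ is a Lie group diffeomorphism pulling back $\bge_1$ to $\bge_0$ and $\bk_1$ to $\bk_0$, the pullback $(\chi\times\roId)^*g_1$ has the form $-dt\otimes dt+\chi^*\bge_1(t)$, where $\bge_1(t)$ denotes the spatial metric on $G_1$ at time $t$. The key point is that $\chi^*\bge_1(t)$ is again a left invariant Riemannian metric on $G_0$ (because $\chi$ is a Lie group isomorphism, it carries left invariant tensor fields to left invariant tensor fields), so $((\chi\times\roId)^*g_1,\phi_1)$ is of the form required in Definition~\ref{def:BianchiAdevelopment} with underlying group $G_0$. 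Since pulling back by an isometry preserves the Einstein non-linear scalar field equations (these are tensorial, or alternatively invariant under diffeomorphism), this triple satisfies (\ref{seq:ENLSFE}), and since the development has been normalised so that the data $\mfI_1$ are induced at $t_0$, the pulled-back triple induces $\chi^*\mfI_1=\mfI_0$ at $t_0$. Hence $(G_0\times J_1,(\chi\times\roId)^*g_1,\phi_1)$ is a Bianchi class A non-linear scalar field development of $\mfI_0$ on the interval $J_1$.

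Next I would invoke maximality and uniqueness. The development $\md[V](\mfI_1)$ has existence interval $J_1=(t_-^{(1)},t_+^{(1)})$ characterised by the condition that $|\theta_1|+|\phi_1|$ is unbounded as $t\to t_\pm^{(1)}$ (or the endpoint is $\pm\infty$). But $\theta_1$ is the mean curvature of the constant-$t$ hypersurfaces of $g_1$, and since $\chi\times\roId$ is an isometry onto $G_0\times J_1$ carrying constant-$t$ slices to constant-$t$ slices, the mean curvature of the pulled-back metric at time $t$ equals $\theta_1(t)$; likewise the scalar field is literally the same function $\phi_1$ on $J_1$. Therefore the pulled-back development is itself maximal in the sense of Proposition~\ref{prop:unique max dev}. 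By the uniqueness clause of that proposition (applied with the time-translation ambiguity already fixed by the normalisation at $t_0$), it coincides with $\md[V](\mfI_0)$; in particular $J_0=J_1$, and $\chi\times\roId$ is an isometry from $\md[V](\mfI_0)$ to $\md[V](\mfI_1)$, which is exactly the assertion.

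I do not expect any serious obstacle here; the statement is essentially a formal consequence of the covariance of the Einstein equations together with the uniqueness already established in Proposition~\ref{prop:unique max dev}. The one point requiring a little care is the bookkeeping that a Lie group isomorphism $\chi$ does indeed map left invariant metrics and left invariant symmetric $2$-tensors to left invariant ones (so that the pulled-back development lies in the class covered by Definition~\ref{def:BianchiAdevelopment} with group $G_0$), and that it preserves the Bianchi type; this follows from $\chi_*$ being a Lie algebra isomorphism $\mfg_0\to\mfg_1$, but it should be stated explicitly. A second minor point is to confirm that the normalisation of the time-translation ambiguity is consistent on both sides: both $\md[V](\mfI_0)$ and $\md[V](\mfI_1)$ are required to induce their respective data at the same $t_0$, and $\chi^*$ sends the data induced at $t_0$ by $g_1$ to the data induced at $t_0$ by $(\chi\times\roId)^*g_1$, so the two normalisations match and no shift is introduced.
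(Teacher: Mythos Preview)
Your argument has a genuine gap: you assume that $\chi$ is a Lie group isomorphism, but the hypothesis only says that $\chi$ is an \emph{isometry of initial data}, i.e.\ a diffeomorphism with $\chi^*\bge_1=\bge_0$, $\chi^*\bk_1=\bk_0$, etc. These are not the same thing. For a concrete counterexample, take isotropic Bianchi type IX data on $G_0=G_1=\mathrm{SU}(2)$: the metric is bi-invariant, so every right translation $R_g$ is an isometry of the data, but $R_g$ is not a Lie group homomorphism for $g\neq e$. Consequently the sentence ``because $\chi$ is a Lie group isomorphism, it carries left invariant tensor fields to left invariant tensor fields'' is unjustified, and with it the claim that $(\chi\times\roId)^*g_1$ is a Bianchi class~A development of $\mfI_0$ in the sense of Definition~\ref{def:BianchiAdevelopment}. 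Without that, Proposition~\ref{prop:unique max dev} does not apply.

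The paper does not try to reduce to abstract uniqueness. Instead it works in adapted frames. Via Lemma~\ref{lemma:isometryinducesequalnuK} it shows that, with respect to orthonormal frames $\{e_{\a,i}\}$ diagonalising the commutator and Weingarten matrices, the numerical data are related by some $\psi_\sigma^\pm$; after relabelling one frame, the ODE data coincide, so the scale factors $a_{\a,i}$ coincide and each development reads $g_\a=-dt\otimes dt+\sum_i a_i^2(t)\,\xi_\a^i\otimes\xi_\a^i$. One then determines how $\chi_{*\rod}$ acts on the frames: in the generic case $\chi_{*\rod}e_{0,i}=\e_i e_{1,\sigma(i)}$ (Remark~\ref{remark:frame to frame}), so $\chi$ pulls $\xi_1^i$ to $\pm\xi_0^{\sigma^{-1}(i)}$ and the metric is preserved; in the LRS case $\chi_{*\rod}$ fixes the $e_{\cdot,1}$ direction up to sign and acts orthogonally on the $e_{\cdot,2},e_{\cdot,3}$ plane, which is harmless because $a_2=a_3$; in the isotropic case all $a_i$ coincide and $g_\a=-dt\otimes dt+a_1^2(t)\bah_\a$, so any isometry of $\bah_0$ to $\bah_1$ does the job. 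In each case one checks directly that $(\chi\times\roId)^*g_1=g_0$. Your route via uniqueness could in principle be resurrected, but only after proving that $\chi^*\bge_1(t)$ is left invariant for all $t$ --- and establishing that essentially requires the same frame analysis the paper carries out.
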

\begin{remark}
  Under the assumptions of the proposition, $\chi$ gives rise to an isometry between the initial data sets induced on $\{t\}\times G_0$ and
  $\{t\}\times G_1$ for any $t$ in the existence interval. 
\end{remark}
\begin{remark}
  If $\Gamma$ is a free and properly discontinuous subgroup of the isometry group of an initial data set, then the proposition implies that
  $\Gamma\times\roId$ is a free and properly discontinuous subgroup of the isometry group of the corresponding development. Taking the
  quotient yields a spatially locally homogeneous development of the initial data induced on the quotient of the initial manifold with $\Gamma$. 
\end{remark}
\begin{proof}
  The proof can be found just above the statement of Lemma~\ref{lemma:isometryinducesequal o P}.
\end{proof}

\subsection{Developments inducing data on the singularity}\label{ssection:dev ind data on sing}

In order to relate developments to data on the singularity, we first define what it means for a development to induce data on the singularity.
\begin{definition}\label{def:ind data on sing}
  Let $V\in C^{\infty}(\ro)$, $\mfI\in\mB[V]$ and $(M,g,\phi)=\mD[V](\mfI)$. Assume $(M,g,\phi)$ to have a crushing singularity. Using the terminology
  of Definition~\ref{def:BianchiAdevelopment}, let $\bge$ and $K$ be the metrics and Weingarten maps induced on $G_t$, $t\in J$, and define the
  \textit{expansion normalised Weingarten map}
  \index{Weingarten map!Expansion normalised}%
  \index{Expansion normalised!Weingarten map}%
  by $\mK:=K/\theta$. If there are $\mfI_\infty=(G,\msH,\msK,\Phi_{0},\Phi_{1})\in \mS$ such that $\mK\rightarrow\msK$,
  $\theta^{-1}\phi_t\rightarrow\Phi_{1}$ and $\phi+\theta^{-1}\phi_t\ln\theta\rightarrow\Phi_{0}$ as $t\downarrow t_-$; and such that the following
  limit holds for all $v,w\in\mfg$:
  \begin{equation}\label{eq:bAmetriclimit}
    \lim_{t\downarrow t_-}\bge(\theta^{\mK}v,\theta^{\mK}w)=\msH(v,w),
  \end{equation}
  then the development $(M,g,\phi)$ is said to \textit{induce $\mfI_\infty$ on the singularity}.
  \index{Development!Inducing data on singularity}%
  \index{Inducing data on singularity!Development}%
  \index{Initial data!On singularity!From development}%
\end{definition}
\begin{remark}
  One way to define the Weingarten map is as the second fundamental form with one index raised by the induced metric.
\end{remark}
\begin{remark}
  When we say that $\mK\rightarrow\msK$, we consider $\mK$ to be a family of tensor fields, indexed by $t$, on the fixed manifold $G$, and
  we mean that this family converges pointwise with respect to a fixed reference metric on $G$. 
\end{remark}
\begin{remark}
  Under the assumptions of the definition, $t_->-\infty$, so that we can assume $t_-=0$; see the proof of Theorem~\ref{thm:dataonsingtosolution}
  in Section~\ref{ssection:dataonsingularity}. 
\end{remark}
\begin{remark}
  In (\ref{eq:bAmetriclimit}), it is understood that
  \[
    \theta^{\mK}:=\textstyle{\sum}_{l=0}^{\infty}\tfrac{1}{l!}(\ln \theta)^l\mK^l.
  \]  
\end{remark}
\begin{lemma}\label{lemma:iso reg data iso dos}
  Under the assumptions of Proposition~\ref{prop:iso id to iso sol}, if $\md[V](\mfI_1)$ induces $\mfI_{1,\infty}$ on the singularity,
  then $\md[V](\mfI_0)$ also induces initial data, say $\mfI_{0,\infty}$, on the singularity. Moreover, $\chi$ is an isometry from
  $\mfI_{0,\infty}$ to $\mfI_{1,\infty}$; cf. Definition~\ref{def:iso of data on sing}. 
\end{lemma}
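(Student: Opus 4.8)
The plan is to transport the asymptotic data along the spacetime isometry $\chi\times\roId$ supplied by Proposition~\ref{prop:iso id to iso sol}, and to show that each of the four convergence statements in Definition~\ref{def:ind data on sing} is preserved under this transport. First I would fix the time translation so that both developments induce $\mfI_\a$ at a common $t_0$, so that $\Psi:=\chi\times\roId:\md[V](\mfI_0)\to\md[V](\mfI_1)$ is an isometry of Lorentz manifolds, and note that the two developments share the same existence interval $J=(t_-,t_+)$; in particular $\md[V](\mfI_0)$ has a crushing singularity iff $\md[V](\mfI_1)$ does, since $\Psi$ maps $G_t^0$ isometrically onto $G_t^1$ and hence preserves the mean curvature function $\theta$ as a function of $t$. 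Because $\Psi$ is an isometry of the spacetimes and $\phi\circ\Psi=\phi$ (the scalar fields depend only on $t$ and agree), the induced metrics and Weingarten maps satisfy $\chi^*(\bge_t^1)=\bge_t^0$ and $\chi^*(K_t^1)=K_t^0$ for every $t\in J$, where $\chi^*K:=(\chi^{-1})_*\circ K\circ\chi_*$ as in Definition~\ref{def:iso of data on sing}. Dividing by $\theta(t)$ gives $\chi^*(\mK_t^1)=\mK_t^0$.

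Next I would push the hypothesis through. By assumption there is $\mfI_{1,\infty}=(G_1,\msH_1,\msK_1,\Phi_{1,0},\Phi_{1,1})\in\mS$ with $\mK_t^1\to\msK_1$, $\theta^{-1}\phi_t\to\Phi_{1,1}$, $\phi+\theta^{-1}\phi_t\ln\theta\to\Phi_{1,0}$, and the metric limit \eqref{eq:bAmetriclimit} holding on $\mfg_1$. Define $\msH_0:=\chi^*\msH_1$, $\msK_0:=\chi^*\msK_1$, $\Phi_{0,0}:=\Phi_{1,0}$, $\Phi_{0,1}:=\Phi_{1,1}$. The scalar field limits for $\md[V](\mfI_0)$ are then immediate, since $\phi$ and $\theta$ are literally the same functions of $t$ for the two developments. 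For the Weingarten map convergence, I would use that $\chi_*:\mfg_0\to\mfg_1$ is a fixed linear isomorphism (independent of $t$): from $\mK_t^0=(\chi^{-1})_*\circ\mK_t^1\circ\chi_*$ and $\mK_t^1\to\msK_1$ one gets $\mK_t^0\to(\chi^{-1})_*\circ\msK_1\circ\chi_*=\msK_0$ pointwise with respect to any fixed reference metric on $G_0$ (here one uses that $\chi$ is left invariant, equivalently that $\chi_*$ on the Lie algebra level determines $\chi^*\mK$ as a left invariant tensor field, so "pointwise convergence against a fixed reference metric" reduces to convergence of the constant component matrices). For the metric limit, I would compute, for $v,w\in\mfg_0$,
\[
  \bge_t^0(\theta^{\mK_t^0}v,\theta^{\mK_t^0}w)
  =\bge_t^1\bigl(\chi_*\theta^{\mK_t^0}v,\chi_*\theta^{\mK_t^0}w\bigr)
  =\bge_t^1\bigl(\theta^{\mK_t^1}\chi_*v,\theta^{\mK_t^1}\chi_*w\bigr),
\]
where the last equality uses $\chi_*\circ\mK_t^0=\mK_t^1\circ\chi_*$, hence $\chi_*\circ(\mK_t^0)^l=(\mK_t^1)^l\circ\chi_*$ for all $l$, hence $\chi_*\circ\theta^{\mK_t^0}=\theta^{\mK_t^1}\circ\chi_*$ from the series definition of $\theta^{\mK}$. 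Letting $t\downarrow t_-$ and invoking \eqref{eq:bAmetriclimit} for $\md[V](\mfI_1)$ gives $\msH_1(\chi_*v,\chi_*w)=(\chi^*\msH_1)(v,w)=\msH_0(v,w)$, which is precisely the required limit.

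It remains to check $\mfI_{0,\infty}=(G_0,\msH_0,\msK_0,\Phi_{0,0},\Phi_{0,1})\in\mS$ and that $\chi$ is an isometry $\mfI_{0,\infty}\to\mfI_{1,\infty}$. The latter is built into the definitions above: $\chi^*\msH_1=\msH_0$, $\chi^*\msK_1=\msK_0$, $\chi^*\Phi_{1,j}=\Phi_{0,j}$. For membership in $\mS$, the conditions of Definition~\ref{def:ndvacidonbbssh} are preserved because $\chi$ is a Lie group isomorphism (up to the usual identification in the simply connected setting, and in any case $\chi_*$ is a Lie algebra isomorphism): trace, symmetry with respect to $\msH_0$, the divergence condition, the eigenvalues of $\msK_0$ (which equal those of $\msK_1$), and the subalgebra condition in item 3 all transfer under $\chi_*$; similarly the LRS normal form in item 4 and the exclusion of isotropic/LRS Bianchi type VII${}_0$ data are invariant under Lie algebra isomorphism. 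I expect the only genuinely delicate point is bookkeeping the $\theta^{\mK}$-conjugation identity $\chi_*\circ\theta^{\mK_t^0}=\theta^{\mK_t^1}\circ\chi_*$ and confirming that the notion of "pointwise convergence against a fixed reference metric" behaves well under the fixed linear identification $\chi_*$ of the two Lie algebras; everything else is a routine transfer of algebraic conditions along an isomorphism. The argument should mirror, and may directly cite, the already-established analogue Proposition~\ref{prop:iso id to iso sol} for the regular (finite-time) data.
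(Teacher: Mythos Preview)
Your proposal is correct and follows essentially the same route as the paper: invoke the spacetime isometry $\chi\times\roId$ from Proposition~\ref{prop:iso id to iso sol}, deduce that $\theta$ and $\phi$ coincide as functions of $t$, conjugate the expansion normalised Weingarten maps by $\chi_*$, and use the identity $\chi_*\circ\theta^{\mK_0}=\theta^{\mK_1}\circ\chi_*$ (which the paper writes as $\theta_0^{\mK_0}=(\chi^{-1})_*\theta_1^{\mK_1}\chi_*$) together with $\chi^*\bge_1=\bge_0$ to pass the metric limit. The paper is terser and does not spell out the verification that $\mfI_{0,\infty}\in\mS$, whereas you do; your extra care there is harmless, though the cleanest justification is simply that $\msH_0$ and $\msK_0$ arise as limits of left invariant objects on $G_0$ and hence are themselves left invariant, with the eigenvalue and subalgebra conditions then following from those for $\mfI_{1,\infty}$.
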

\begin{proof}
  Due to the conclusions of Proposition~\ref{prop:iso id to iso sol}, $\chi\times\roId$ is an isometry from $\md[V](\mfI_0)$ to $\md[V](\mfI_1)$.
  Denote the metric and scalar field of $\md[V](\mfI_\a)$ by $g_\a$ and $\phi_\a$ respectively. Denote the mean curvature of $G_\a\times \{t\}$ by
  $\theta_\a(t)$. Since $\theta_\a$ is independent of the spatial coordinate, and since $\chi\times\roId$ is an isometry, we have to have
  $\theta_1=\theta_0$. Next, since the scalar fields are independent of the spatial coordinate, the form of the isometry implies that
  $\phi_0=\phi_1$. Since $\theta_1=\theta_0$ and $\phi_1=\phi_0$, it follows that the scalar fields induce the same data on the singularity.
  It remains to consider the limits for the geometry. Let $\mK_\a$ denote the expansion normalised Weingarten map of the constant-$t$ hypersurfaces
  in $\md[V](\mfI_\a)$. Then
  \[
  \msK_0=\lim_{t\downarrow t_-}\mK_0=\lim_{t\downarrow t_-}(\chi^{-1})_*\circ\mK_1\circ\chi_*=(\chi^{-1})_*\circ\msK_1\circ\chi_*.
  \]
  This yields the desired statement for the limit of the expansion normalised Weingarten map. Finally
  \begin{equation*}
    \begin{split}
      \lim_{t\downarrow t_-}\bge_0(\theta_0^{\mK_0}v,\theta_0^{\mK_0}w) = & \lim_{t\downarrow t_-}\bge_0\big((\chi^{-1})_*\theta_1^{\mK_1}\chi_*v,
      (\chi^{-1})_*\theta_1^{\mK_1}\chi_* w\big)\\
      = & \lim_{t\downarrow t_-}\bge_1\big(\theta_1^{\mK_1}\chi_*v,\theta_1^{\mK_1}\chi_* w\big)=\msH_1(\chi_*v,\chi_*w)=\chi^{*}\msH_1(v,w).
    \end{split}
  \end{equation*}
  Thus $\msH_0=\chi^{*}\msH_1$. The lemma follows.
\end{proof}
In order to obtain results, we typically need to impose conditions on the potential. They can often be phrased in terms of the following terminology.
\begin{definition}\label{def:mfP a inf}
  Let $\a_V\in [0,\infty)$ and $k\in\nn{}$. Then the set of $V\in C^{\infty}(\ro)$ such that there is a constant $c_k<\infty$ with the property that
  \begin{equation}\label{eq:V k-derivatives exp estimate}
    \textstyle{\sum}_{l=0}^k|V^{(l)}(s)|\leq c_ke^{\sqrt{6}\a_V|s|}
  \end{equation}
  for all $s\in\ro$ is denoted $\mfP_{\a_V}^k$.
  \index{$\a$Aa@Notation!Spaces of potentials!mfPka@$\mfP_{\a_V}^k$}%
  Moreover, $\mfP_{\a_V}^{\infty}:=\cap_{l=0}^{\infty}\mfP_{\a_V}^{l}$.
  \index{$\a$Aa@Notation!Spaces of potentials!mfPinfa@$\mfP_{\a_V}^\infty$}%
\end{definition}
The next result states that there is a unique development corresponding to data on the singularity. An analogous result in the Einstein scalar
field setting can be found in \cite[Theorem~16, p.~1513]{RinQCSymm}. A much more general result, in the \textit{absence of symmetries}, is
contained in \cite[Theorem~1.9]{andres} and \cite[Theorem~1.19]{andres}; see also \cite{aarendall,klinger,fal}. However, the assumptions in
\cite{aarendall,klinger,fal,andres} are such that the following theorem is not a special case of the results of \cite{aarendall,klinger,fal,andres}.
\begin{thm}\label{thm:dataonsingtosolution}
  Let $\a_{V}\in (0,1)$ and $0\leq V\in \mfP_{\a_V}^{2}$; see Definition~\ref{def:mfP a inf}. Fix a Bianchi class A type $\mfT$ and
  $\mfs\in \{\iso,\roLRS,\roper,\rogen\}$. Assume that $\mfI_\infty:=(G,\msH,\msK,\Phi_{0},\Phi_{1})\in\mS_{\mfT}^\mfs$. Then there is a unique $\md[V](\mfI)$,
  arising from $\mfI=(G,\bge,\bk,\bphi_0,\bphi_1)\in\mB_{\mfT}[V]$, inducing $\mfI_\infty$ on the singularity.
  \index{Development!Initial data on singularity!Existence}%
  \index{Development!Initial data on singularity!Uniqueness}%
  Moreover, $\mfI\in\mB_{\mfT}^{\mfs}[V]$ and if $J$ is the existence interval of $\md[V](\mfI)$, the left endpoint of $J$ can be assumed to equal $0$.
  Finally, if $\mfI_\infty$ satisfy all the conditions of Definition~\ref{def:ndvacidonbbssh} but the last two; $1$ is an eigenvalue of $\msK$;
  and the final condition of Definition~\ref{def:ndvacidonbbssh} is violated, then there is no Bianchi class A non-linear scalar field development 
  inducing $\mfI_\infty$ on the singularity. 
\end{thm}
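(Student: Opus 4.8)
The plan is to reduce the problem to a spatially homogeneous system of ODEs in a carefully chosen set of expansion-normalised variables (essentially Wainwright--Hsu type variables adapted to the non-linear scalar field), and to run a fixed-point / stable-manifold argument near the singularity. First I would set up the gauge: by Proposition~\ref{prop:unique max dev} and Remark~\ref{remark:improved dichotomy} (using $V\geq 0$), any candidate development is of the form $(G\times J, -dt\otimes dt+\bge, \phi)$ with $t_-$ normalised to $0$ and $\theta\to\infty$ as $t\downarrow 0$. Introduce the logarithmic time $\tau$ with $d\tau/dt=\theta$ (so $\tau\to-\infty$ at the singularity when the development induces data, but one should not presuppose this), the shear variables $\Sigma=(K-\tfrac{\theta}{3}\mathrm{Id})/\theta$, the normalised commutator variables $N_i=\bn_i/\theta$, and the normalised scalar-field variables $\Sigma_\phi=\phi_t/\theta$ and $\Phi_\phi$ encoding $\phi$ itself; then rescale the potential contribution, which by the hypothesis $V\in\mfP_{\a_V}^2$ with $\a_V\in(0,1)$ decays like $e^{-c|\tau|}$ relative to $\theta^2$ along any solution approaching the singularity (this is the standard way the condition $\a_V<1$ enters: it guarantees the potential term is subdominant to the scalar-field kinetic term $\Sigma_\phi^2$ near the singularity). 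The Hamiltonian constraint (\ref{eq:nuvhc}), rewritten in these variables, becomes $\mathrm{tr}\,\Sigma^2+\Sigma_\phi^2+(\text{curvature terms}) = 1+(\text{potential terms})$, which in the limit reduces exactly to the constraint $\mathrm{tr}\,\msK^2+\Phi_1^2=1$ of Definition~\ref{def:ndvacidonbbssh}(2) after translating $\msK=\mathrm{Id}/3+\Sigma_\infty$.

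For the \emph{existence and uniqueness} half (the first two sentences of the theorem), the argument is: given $\mfI_\infty$, the conditions (1)--(4) of Definition~\ref{def:ndvacidonbbssh} single out a point on the relevant invariant set of the limiting ODE system (the "Kasner-like" set where $N_i=0$ and the curvature vanishes, or, when $1$ is an eigenvalue of $\msK$, a Taub-type point). One then performs a Fuchsian/fixed-point analysis: write the unknowns as the prescribed asymptotic data plus a remainder, derive the linearised system at the limit point, check that the eigenvalues of the linearisation in the direction of increasing $|\tau|$ have the right signs so that the relevant stable manifold is nonempty and of the expected dimension, and solve the remainder equation by a contraction argument in a weighted Banach space (weights $e^{\delta|\tau|}$ for suitable $\delta>0$). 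Conditions (3) — that $\mfh$ be a subalgebra — and (4) are exactly what is needed for the Kasner exponents $p_A$ to be consistent with the structure constants decaying (i.e. for $N_i e^{(\text{relevant combination of }p_A)|\tau|}\to 0$): this is the familiar "no $N_i$ can blow up" condition that in vacuum Bianchi IX/VIII produces oscillations. Then one checks the symmetry-type and Bianchi-type persistence (that $\mfI\in\mB_\mfT^\mfs[V]$) using Proposition~\ref{prop:unique max dev} together with uniqueness: an isometry of $\mfI_\infty$ of the prescribed symmetry type must, by uniqueness of the development inducing it and by Lemma~\ref{lemma:iso reg data iso dos} run in reverse, be an isometry of $\mfI$. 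Uniqueness itself follows from the contraction: any two developments inducing the same $\mfI_\infty$ have the same asymptotic expansion to all orders and hence, by the weighted estimates, coincide, then one integrates the ODE system forward.

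For the \emph{non-existence} half — the last sentence — suppose $\mfI_\infty$ satisfies (1),(2),(3) but not (4), with $1$ an eigenvalue of $\msK$. Then in a putative development the eigenvalue $1$ forces the corresponding Kasner exponents to be $(1,0,0)$, i.e. a Taub point on the Kasner circle. The key computation is that the linearisation of the ODE system at a Taub point has an eigendirection along which $N_i$ is driven \emph{unstably} unless the frame can be chosen so that $\Psi_t$ is a Lie algebra isomorphism for all $t$ — precisely condition (4). Concretely, if (4) fails, there is a structure constant $\gamma^k_{ij}$ mixing the $e_1$-direction with the $e_2,e_3$-plane in a way that is not rotation-invariant, and the evolution equation for the corresponding $N$-component has the form $N' = (\text{positive coefficient})\,N + \cdots$ near the singularity, forcing $|N|\to\infty$ rather than $\to 0$; but $N=\bn/\theta\to 0$ is necessary for the metric limit (\ref{eq:bAmetriclimit}) to exist and be non-degenerate and for $\mK\to\msK$ with $\msK$ having $1$ as an eigenvalue. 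Hence no development can induce $\mfI_\infty$. I expect this last step — identifying the precise obstruction, i.e.\ showing that failure of the algebraic condition (4) is \emph{exactly} equivalent to the presence of a growing mode and hence to non-existence, rather than merely sufficient for some growing mode — to be the main obstacle, since it requires a careful frame-by-frame analysis of the commutator matrix under the evolution and matching it against the Lie-algebra-isomorphism condition; the weighted-space contraction arguments in the existence half, while lengthy, are comparatively routine given the decay rate $\a_V<1$.
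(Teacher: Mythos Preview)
Your existence/uniqueness strategy matches the paper's: the contraction in weighted spaces is exactly Lemma~\ref{lemma:abs exist and unique} applied via Proposition~\ref{prop:variableexandunique}, and the symmetry-type persistence is handled (as in the paper) by the equivariance of the ODE system under the permutation maps together with uniqueness.

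The non-existence argument, however, has a genuine gap. You claim that when condition~(4) fails there is a structure-constant direction whose $N$-component satisfies $N'=(\text{positive coefficient})N+\cdots$, giving a growing mode. But compute the linearisation at the Taub point $(\Sigma_+,\Sigma_-)=(-1,0)$: from (\ref{eq:nip})--(\ref{seq:fidef}) one has $q\to 2$ and
\[
N_2' = (q+2\Sigma_+ + 2\sqrt{3}\Sigma_-)N_2 \to 0\cdot N_2,\qquad
N_3' = (q+2\Sigma_+ - 2\sqrt{3}\Sigma_-)N_3 \to 0\cdot N_3,
\]
so the directions that distinguish condition~(4) holding from failing are \emph{neutral}, not unstable. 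There is no positive coefficient to appeal to; a linearised growing-mode argument cannot work here.

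The paper's mechanism is different and you should replace your last paragraph with it. One introduces $\mff=\tfrac{4}{3}\Sigma_-^2+(N_2-N_3)^2$ and computes (\ref{eq:mffpr}); at the Taub point the coefficients $2-q$, $q+2\Sigma_+$, and $N_1$ all tend to zero, so $\mff'\leq \epsilon\mff$ for $\tau$ near $-\infty$, giving $\mff(T)\leq \mff(\tau)e^{\epsilon(T-\tau)}$. Independently, using (\ref{eq:Sp prime LRS char}) one shows $|\Sigma_++1|\leq C_\epsilon e^{(6-\epsilon)\tau}$, and then the constraint (\ref{eq:constraint NoNthNth}) forces $\mff(\tau)\leq C_\epsilon e^{(6-\epsilon)\tau}$. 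Plugging this back into the first inequality and sending $\tau\to-\infty$ gives $\mff(T)=0$; hence $\Sigma_-\equiv 0$ and $N_2\equiv N_3$, i.e.\ the solution lies in the LRS subspace, which is exactly condition~(4). This is Proposition~\ref{prop:limitcharsp}: a rigidity statement at a centre-type equilibrium, obtained by playing a slow differential inequality for $\mff$ against a fast decay estimate for $\mff$ coming from the constraint, rather than any eigenvalue count.
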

\begin{remark}\label{remark:V geq zero cond}
  It is only necessary to impose the condition $V\geq 0$ in case $1$ is an eigenvalue of $\msK$; i.e., if the last condition of
  Definition~\ref{def:ndvacidonbbssh} is satisfied. 
\end{remark}
\begin{remark}\label{remark:form of dev of dos}
  If $\mfI_\infty$ are isotropic, the metric of the development takes the form
  \begin{equation}\label{eq:iso form of development}
    g=-dt\otimes dt+a^2(t)\msH,
  \end{equation}
  where $a$ is a smooth and strictly positive function on the existence interval. 
  If $\mfI_\infty$ are LRS, there is an orthonormal frame $\{e_i\}$ of the Lie algebra $\mfg$ with respect to $\msH$, consisting of eigenvectors of
  $\msK$ (with eigenvalue $p_i$ corresponding to $e_i$) such that the associated commutator matrix is diagonal with diagonal components $o_i$; the
  $o_i$ have signs as in Table~\ref{table:bianchiA}; $p_2=p_3$; $o_2=o_3$; and either $o_1\neq o_2$ or $p_1\neq p_2$. Moreover, if $\{\xi^i\}$ is
  the frame dual to $\{e_i\}$, then the metric takes the form
  \begin{equation}\label{eq:LRS or PS form of development}
    g=-dt\otimes dt+a_1^2(t)\xi^1\otimes \xi^1+a_2^2(t)[\xi^2\otimes \xi^2+\xi^3\otimes \xi^3],
  \end{equation}
  where the $a_i$ are smooth and strictly positive functions on the existence interval.
  Moreover, this statement holds irrespective of the choice of frame $\{e_i\}$ with the above properties. In case $\mfI_\infty$ are permutation
  symmetric, a similar statement holds. The only difference concerning the frame is that $o_2=-o_3$ and that $o_1\neq o_2$ is automatically satisfied.
  The conclusion concerning the form of the metric is the same. 
\end{remark}
\begin{proof}
  See Section~\ref{ssection:dataonsingularity} for a proof of both the theorem and of Remark~\ref{remark:form of dev of dos}.
\end{proof}
Combining this result with Lemma~\ref{lemma:iso reg data iso dos} and Remark~\ref{remark:isometric data are of same type} yields the following
observation. 
\begin{lemma}\label{lemma:iso ri iso ids incl type}
  Let $\a_{V}\in (0,1)$ and $0\leq V\in \mfP_{\a_V}^{2}$. Fix a Bianchi class A type $\mfT$ and $\mfs\in \{\iso,\roLRS,\roper,\rogen\}$.
  Assume that $\mfI\in \mB_\mfT^\mfs[V]$ and that $\md[V](\mfI)$ induces data $\mfI_\infty$ on the singularity. Then $\mfI_\infty\in\mS_\mfT^\mfs$.
  Moreover, if $\mfI_1$ is isometric with $\mfI$, then $\mfI_1\in \mB_\mfT^\mfs[V]$; $\md[V](\mfI_1)$ induces data $\mfI_{\infty,1}$ on the
  singularity; $\mfI_{\infty,1}\in\mS_\mfT^\mfs$; and $\mfI_\infty$ and $\mfI_{\infty,1}$ are isometric. In particular, there is a well defined map
  from elements of $\mfB_\mfT^\mfs[V]$ whose developments induce data on the singularity to $\mfS_\mfT^\mfs$. 
\end{lemma}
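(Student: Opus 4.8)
Lemma~\ref{lemma:iso ri iso ids incl type} is essentially an assembly of three previously established facts, so the plan is to invoke them in the right order rather than prove anything new. The input is $\mfI\in\mB_\mfT^\mfs[V]$ whose development $\md[V](\mfI)$ induces data $\mfI_\infty$ on the singularity, together with an isometric copy $\mfI_1$ of $\mfI$.

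First I would establish that $\mfI_\infty\in\mS_\mfT^\mfs$. Since $\mfI$ is of Bianchi type $\mfT$ and symmetry type $\mfs$, and $\md[V](\mfI)$ induces $\mfI_\infty$ on the singularity by hypothesis, this is exactly the content of the final assertion of Theorem~\ref{thm:dataonsingtosolution}: there the development arising from $\mfI\in\mB_\mfT[V]$ and inducing given data on the singularity is shown to have $\mfI\in\mB_\mfT^\mfs[V]$ \emph{and} the data on the singularity of type $\mfT,\mfs$; read in the direction relevant here (we already know $\mfI\in\mB_\mfT^\mfs[V]$), uniqueness forces $\mfI_\infty$ to be the type-$\mfT$, symmetry-type-$\mfs$ data on the singularity. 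Alternatively, and perhaps more directly, one notes that the Lie group $G$ of $\mfI_\infty$ is the same $G$ as that of $\mfI$ (Definition~\ref{def:ind data on sing}), hence of Bianchi type $\mfT$; and the symmetry type is inherited through the limit \eqref{eq:bAmetriclimit} together with $\mK\to\msK$, which is precisely what Remark~\ref{remark:isometric data are of same type} (cited in the sentence preceding the lemma) records. I would cite that remark for the symmetry-type bookkeeping.

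Next I would handle the isometric copy $\mfI_1$. Because isometry classes are what $\mB_\mfT^\mfs[V]$ is built to respect, $\mfI_1\in\mB_\mfT^\mfs[V]$ is immediate (isometry preserves the Bianchi type of the underlying Lie group and preserves each defining condition of isotropy/LRS/permutation symmetry, since those are formulated in terms of the metric, the Weingarten form, and Lie algebra isomorphisms, all transported by the isometry). Then Lemma~\ref{lemma:iso reg data iso dos} applies verbatim: given the isometry $\chi$ from $\mfI_0:=\mfI$ to $\mfI_1$, and given that $\md[V](\mfI)$ induces $\mfI_\infty$ on the singularity, $\md[V](\mfI_1)$ induces data $\mfI_{\infty,1}$ on the singularity with $\chi$ an isometry from $\mfI_\infty$ to $\mfI_{\infty,1}$. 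Applying the first step again to $\mfI_1$ in place of $\mfI$ gives $\mfI_{\infty,1}\in\mS_\mfT^\mfs$.

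Finally, for the concluding sentence, the map in question sends the isometry class of $\mfI$ to the isometry class of $\mfI_\infty$. Well-definedness is exactly the statement that isometric inputs produce isometric outputs of the same type, which is what the preceding three steps give. The only genuine point requiring care—the \textbf{main obstacle}, such as it is—is the symmetry-type compatibility: one must be sure that the limiting data $\mfI_\infty$ really does land in $\mS_\mfT^\mfs$ and not merely in $\mS_\mfT$, i.e. that the symmetry $\mfs$ is not lost or gained in passing to the singularity. This is not obvious from Definition~\ref{def:ind data on sing} alone and is the reason Remark~\ref{remark:isometric data are of same type} and (for the LRS case with eigenvalue $1$) the LRS discussion following Definition~\ref{def:isotropic} are needed; I would organise the write-up so that this case distinction is discharged by those citations, treating the isotropic, LRS, permutation-symmetric and generic subcases exactly as Theorem~\ref{thm:dataonsingtosolution} and Remark~\ref{remark:form of dev of dos} already do. Everything else is formal.
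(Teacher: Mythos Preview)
Your approach is essentially that of the paper, which likewise assembles Theorem~\ref{thm:dataonsingtosolution}, Lemma~\ref{lemma:iso reg data iso dos}, and Remark~\ref{remark:isometric data are of same type}. One clarification worth making: for the first statement ($\mfI_\infty\in\mS_\mfT^\mfs$), the paper's argument is sharper than either of your two sketches. One first observes that $\mfI_\infty\in\mS_\mfT^{\mfs_1}$ for \emph{some} symmetry type $\mfs_1$; Theorem~\ref{thm:dataonsingtosolution} then yields a unique development arising from some $\mfI'\in\mB_\mfT^{\mfs_1}[V]$ and inducing $\mfI_\infty$, which must coincide with $\md[V](\mfI)$; finally, Proposition~\ref{prop:unique max dev} (the symmetry type is preserved by the evolution) forces $\mfs_1=\mfs$. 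Your citation of Remark~\ref{remark:isometric data are of same type} for this step is misplaced---that remark concerns isometric \emph{regular} initial data and belongs to the second statement (where the paper also invokes it, to conclude $\mfI_1\in\mB_\mfT^\mfs[V]$). The citation you are missing in the first paragraph is Proposition~\ref{prop:unique max dev}; without it, ``uniqueness forces $\mfI_\infty$ to have type $\mfs$'' is not quite a complete sentence, since uniqueness in Theorem~\ref{thm:dataonsingtosolution} identifies the development, not directly the symmetry class of the singularity data.
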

\begin{remark}
  If one is prepared to exclude initial data on the singularity satisfying the last condition of Definition~\ref{def:ndvacidonbbssh}, then
  the requirement $V\geq 0$ can be omitted. 
\end{remark}
\begin{proof}
  The assumption that $\md[V](\mfI)$ induces data $\mfI_\infty$ means that $\mfI_\infty\in\mS_{\mfT}^{\mfs_1}$ for some symmetry type $\mfs_1$; see
  Definition~\ref{def:ind data on sing}. Then, due to Theorem~\ref{thm:dataonsingtosolution}, there is a unique $\md[V](\mfI_1)$, arising from
  some $\mfI_1\in\mB_{\mfT}^{\mfs_1}[V]$, inducing $\mfI_\infty$ on the singularity. In particular, $\md[V](\mfI_1)=\md[V](\mfI)$. However, due to
  Proposition~\ref{prop:unique max dev}, the symmetry type is preserved by the evolution. This means that $\mfs_1=\mfs$. The first statement
  of the lemma follows. The second statement is an immediate consequence of the first statement, Lemma~\ref{lemma:iso reg data iso dos} and
  Remark~\ref{remark:isometric data are of same type}. 
\end{proof}

In analogy with Proposition~\ref{prop:iso id to iso sol}, the following statement holds. 
\begin{prop}\label{prop:iso idos to iso dev}
  Let $\a_{V}\in (0,1)$ and $0\leq V\in \mfP_{\a_V}^{2}$; see Definition~\ref{def:mfP a inf}. Fix a Bianchi class A type $\mfT$ and
  $\mfs\in \{\iso,\roLRS,\roper,\rogen\}$. Assume that $\mfI_{\infty,\a}\in\mS_{\mfT}^\mfs$, $\a\in\{0,1\}$. If
  $\chi$ is an isometry from $\mfI_{\infty,0}$ to $\mfI_{\infty,1}$, see Definition~\ref{def:iso of data on sing}, then $\chi\times\roId$ is an
  isometry of the corresponding developments obtained in Theorem~\ref{thm:dataonsingtosolution}. 
\end{prop}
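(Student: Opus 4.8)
The plan is to mirror the proof of Proposition~\ref{prop:iso id to iso sol}: transport the development $\md[V](\mfI_0)$ by $\chi\times\roId$ and identify the result with $\md[V](\mfI_1)$ via the uniqueness part of Theorem~\ref{thm:dataonsingtosolution}. Write $(M_\a,g_\a,\phi_\a)=\md[V](\mfI_\a)$ for the developments furnished by Theorem~\ref{thm:dataonsingtosolution}, where $\mfI_\a\in\mB_\mfT^\mfs[V]$ induces $\mfI_{\infty,\a}$ on the singularity, $M_\a=G_\a\times J_\a$, and the left endpoint of $J_\a$ equals $0$. Put $F:=\chi\times\roId:G_0\times J_0\rightarrow G_1\times J_0$ and consider the triple $(G_1\times J_0,\tilde{g},\tilde{\phi})$ with $\tilde{g}:=(F^{-1})^*g_0=-dt\otimes dt+\chi_*\bge_0(t)$ and $\tilde{\phi}:=\phi_0$ (a function of $t$ alone, since $F$ is the identity in the $t$-factor). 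As the Einstein non-linear scalar field equations are diffeomorphism invariant, $(G_1\times J_0,\tilde{g},\tilde{\phi})$ solves them.

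The first substantive point is that $(G_1\times J_0,\tilde{g},\tilde{\phi})$ is a Bianchi class A non-linear scalar field development in the sense of Definition~\ref{def:BianchiAdevelopment}; equivalently, that $\chi_*\bge_0(t)$ is a family of left invariant metrics on $G_1$. If $\mfI_{\infty,0}$ is isotropic, LRS or permutation symmetric this is immediate: by Remark~\ref{remark:form of dev of dos}, $\bge_0(t)$ is a $t$-dependent algebraic expression in $\msH_0$ and $\msK_0$, and $\chi$ carries $\msH_0$ to $\msH_1$ and $\msK_0$ to $\msK_1$ (Definition~\ref{def:iso of data on sing}), so $\chi_*\bge_0(t)$ is the same expression in $\msH_1$ and $\msK_1$, hence left invariant. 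In the generic case one argues, exactly as for regular initial data in the proof of Proposition~\ref{prop:iso id to iso sol}, that after composing $\chi$ with a left translation of $G_1$ (which changes $\chi\times\roId$ by an isometry of $\md[V](\mfI_1)$ and keeps $\chi$ an isometry of data on the singularity) it becomes a Lie group isomorphism: since the eigenvalues of $\msK$ are then distinct, the left-trivialised derivative of $\chi$ takes values in a finite set and is therefore constant, forcing $\chi$ to be a translation composed with a Lie group isomorphism. Either way $\chi$ takes left invariant tensor fields on $G_0$, and the commutator matrix of $G_0$, to the corresponding objects on $G_1$; in particular $\chi_*\bge_0(t)$ is left invariant and $F$ intertwines the (expansion normalised) Weingarten maps of $G_0\times\{t\}$ and $G_1\times\{t\}$.

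One then checks that $(G_1\times J_0,\tilde{g},\tilde{\phi})$ has a crushing singularity and induces $\mfI_{\infty,1}$ on the singularity; this is the content of Lemma~\ref{lemma:iso reg data iso dos}, run in the opposite direction. Indeed $\theta$ and $\tilde\phi$ are unchanged by $F$ and the scalar-field data of $\mfI_{\infty,0}$ and $\mfI_{\infty,1}$ are equal real constants, so the scalar-field limits of Definition~\ref{def:ind data on sing} go over to $\Phi_{1,1}$ and $\Phi_{1,0}$; the expansion normalised Weingarten map of the transported development converges to $\chi_*\circ\msK_0\circ\chi_*^{-1}=\msK_1$; and the limit (\ref{eq:bAmetriclimit}) transforms precisely as in the final display of the proof of Lemma~\ref{lemma:iso reg data iso dos}, yielding $\msH_1$. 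Being isometric to $\md[V](\mfI_0)$, the transported triple is also maximal with left endpoint $0$, so Theorem~\ref{thm:dataonsingtosolution} applies to it with data $\mfI_{\infty,1}$.

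By the uniqueness in Theorem~\ref{thm:dataonsingtosolution} this forces $(G_1\times J_0,\tilde{g},\tilde{\phi})=\md[V](\mfI_1)=(M_1,g_1,\phi_1)$; in particular $J_0=J_1$ and $(F^{-1})^*g_0=g_1$, i.e.\ $F^*g_1=g_0$ and $F^*\phi_1=\phi_0$, which is exactly the assertion that $\chi\times\roId$ is an isometry from $\md[V](\mfI_0)$ to $\md[V](\mfI_1)$. The main obstacle is the structural step of the second paragraph — showing $\chi_*\bge_0(t)$ is left invariant on $G_1$, i.e.\ essentially that an isometry of data on the singularity is, outside the symmetric cases, a left translation composed with a Lie group isomorphism — since only then is the transported triple a Bianchi class A development to which Theorem~\ref{thm:dataonsingtosolution} can be applied; the rest is diffeomorphism invariance plus a transcription of Lemma~\ref{lemma:iso reg data iso dos}.
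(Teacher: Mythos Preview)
Your overall strategy—push forward $\md[V](\mfI_0)$ by $\chi\times\roId$ and invoke the uniqueness in Theorem~\ref{thm:dataonsingtosolution}—is a legitimate alternative to the paper's argument, but the justification of the left-invariance step has a gap. In the generic case you assert that the eigenvalues of $\msK$ are distinct; this is false. For instance, Bianchi type IX data on the singularity with $\msK=\roId/3$ and all $o_i$ distinct are generic (see Lemma~\ref{lemma:frame general case}), and more generally ``generic'' only excludes coincidences of the \emph{pair} $(p_i,o_i)$, not of the $p_i$ alone. The same issue surfaces in the LRS case when $\msK_0=\roId/3$ (forcing $o_1\neq o_2$): then $\bge_0(t)$ is not an algebraic expression in $\msH_0$ and $\msK_0$ alone, so the Remark~\ref{remark:form of dev of dos} shortcut does not immediately apply. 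The fix in both situations is to bring in the Ricci tensor of $\msH_\a$ (which $\chi$, being an isometry, also intertwines) alongside $\msK_\a$; this is precisely the content of Lemma~\ref{lemma:isometryinducesequal o P} and Remark~\ref{remark:frame to frame sing}, which yield $\chi_{*\rod}e_{0,i}=\e_i e_{1,\sigma(i)}$ in the generic case and the block structure $\chi_{*\rod}e_{0,1}=\pm e_{1,1}$ in the LRS case. Once you have that, $\chi_*\bge_0(t)$ is indeed left invariant and the rest of your argument goes through.

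For comparison, the paper does not invoke the uniqueness in Theorem~\ref{thm:dataonsingtosolution} at all. It fixes canonical frames $\{e_{\a,i}\}$ via Lemmas~\ref{lemma:frame isotropic setting}--\ref{lemma:frame general case}, uses Lemma~\ref{lemma:isometryinducesequal o P} and the specific normalisations of those frames to force the numerical singularity data $x_0,x_1\in\sfS$ to coincide, deduces from Proposition~\ref{prop:variableexandunique} that the scale factors and scalar fields satisfy $a_{0,i}=a_{1,i}$ and $\phi_0=\phi_1$, and then checks $(\chi\times\roId)^*g_1=g_0$ directly from the frame relations in each symmetry case. Your route is more conceptual; the paper's is more computational but sidesteps the need to recognise the transported triple as a Bianchi class A development.
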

\begin{remark}
  Remark~\ref{remark:V geq zero cond} is equally relevant here.
\end{remark}
\begin{remark}
  As a consequence of the statement, isometry classes of data on the singularity induce isometry classes of developments.
\end{remark}
\begin{proof}
  The proof is to be found at the end of Appendix~\ref{section:isometryclasses}. 
\end{proof}
The asymptotics in Theorem~\ref{thm:dataonsingtosolution} are the best one can hope for, in the sense that they uniquely determine
the corresponding development. They therefore set the standard when deriving asymptotics in what follows. 

\subsection{The Bianchi type IX setting}\label{ssection:Bianchi type IX setting}

The Bianchi type IX setting is complicated by the fact that the set of developments include de Sitter space, which expands both to the future and
to the past; non-vacuum solutions to the Einstein scalar field equations which start with a big bang and end with a big crunch, both of which are
quiescent; Bianchi type IX vacuum solutions with a big bang and a big crunch, both of which are oscillatory; solutions with a static geometry
(by balancing the matter content with a cosmological constant and the spatial scalar curvature; cf. the Einstein static universe);
solutions that have a quiescent big bang singularity but then expand in a de Sitter like fashion etc. Here, we are interested in solutions with a
crushing big bang singularity. This means that we have to exclude some sets of initial data. One way of doing so is to begin by noting that
(\ref{eq:ham con original}) can be written
\begin{equation}\label{eq:ham con ito X}
  \theta^2-\tfrac{3}{2}\bphi_1^2-3V(\bphi_0)=\tfrac{3}{2}|\sigma|_{\bge}^{2}-\tfrac{3}{2}\bS,
\end{equation}
where $\sigma$ is the trace free part of $\bk$. Excluding isotropic Bianchi type I as well as Bianchi type IX initial data, the right hand side
of (\ref{eq:ham con ito X}) is strictly positive. However, in the case of isotropic Bianchi type I, the right hand side is identically zero.
Moreover, in the case of Bianchi type IX, the right hand side can have any sign. However, it is worth noting that, e.g., in the case of de Sitter
space, the right hand side is strictly negative on all hypersurfaces of spatial homogeneity. Moreover, in the case of the Einstein static universe,
the right hand side is strictly negative and constant. One way to exclude these developments is thus to insist on initial data such that both the
mean curvature and the right hand side of (\ref{eq:ham con ito X}) are strictly positive at some point in time (if these conditions are satisfied at
one point in time, then, due to Lemma~\ref{lemma:BianchiAdevelopment}, they are satisfied for the entire past). While this is a useful criterion, it
does exclude some solutions one would like to include; e.g. isotropic solutions with a quiescent big bang singularity. A different way to proceed
is to demand that the initial data be such that the development has the following properties: $\theta>0$ in a neighbourhood of $t_-$ (this condition
excludes both de Sitter space and the Einstein static universe); and for every $\e>0$, there is a $T$ in the existence interval such that
$X(t)/\theta^2(t)>-\e$ and $|V\circ\phi(t)|/\theta^2(t)<\e$ for all $t\leq T$, where $X$ denotes the right hand side of (\ref{eq:ham con ito X}). Since
we are interested in crushing singularities,
the assumption that $\theta(t)>0$ in a neighbourhood of the singularity is quite reasonable. Even in the oscillatory Bianchi type IX vacuum setting, the
positive part of $\bS/\theta^2$ converges to zero. The lower bound on $X/\theta^2$ is therefore quite natural. Finally, we are interested in situations
such that the contribution of the potential is asymptotically negligible in the direction of the crushing singularity. For that reason, the assumption
that $V\circ\phi/\theta^2$ converges to zero is reasonable. In the end, assuming suitable conditions on the potential, the first condition (i.e., that
$X$ and $\theta$ be positive at one point in time) implies the second. However, proving that this is the case is non-trivial, and we therefore
introduce both conditions here.

Before stating the formal definitions, it is convenient to introduce the following notation, given $V\in C^{\infty}(\ro)$:
\begin{equation}\label{eq:mfX def}
  \mfX(\theta,\bphi_0,\bphi_1):=\theta^2-\tfrac{3}{2}\bphi_1^2-3V(\bphi_0).
\end{equation}
\index{$\a$Aa@Notation!Functions!mfX@$\mfX$}%
\begin{definition}\label{definition:Bap Bp Bpp}
  Let $0\leq V\in C^{\infty}(\ro)$ and $\mfI\in \mB_{\mrIX}[V]$. Let $J$ and $\theta$ be the existence interval and mean curvature,
  respectively, of the corresponding development, see Proposition~\ref{prop:unique max dev}. If there is a $t_0\in J$ such that $\theta(t)>0$
  for all $t\leq t_0$; and if for every $\e>0$, there is a $T\in J$ such that $X(t)/\theta^2(t)>-\e$ and $V\circ\phi(t)/\theta^2(t)<\e$ for all
  $t\leq T$, where $X(t):=\mfX[\theta(t),\phi(t),\phi_t(t)]$, see (\ref{eq:mfX def}), then $\mfI$ are said to be \textit{almost positive}
  \index{Almost positive!Initial data}%
  \index{Initial data!Almost positive}%
  \index{Initial data!Regular!Almost positive}%
  initial data, written $\mfI\in \mB_{\mrIX,\roap}[V]$.
  \index{$\a$Aa@Notation!Sets of regular initial data!mBIXapV@$\mB_{\mrIX,\roap}[V]$}%
  Next, $\mfI$ are said to be \textit{positive}
  \index{Positive!Initial data}%
  \index{Initial data!Positive}%
  \index{Initial data!Regular!Positive}%
  initial data if there is a $t_0\in J$ such that $X(t_0)>0$ and $\theta(t_0)>0$, written $\mfI\in \mB_{\mrIX,+}[V]$.
  \index{$\a$Aa@Notation!Sets of regular initial data!mBIX+V@$\mB_{\mrIX,+}[V]$}%
  Finally $\mfI$ are said to be \textit{pseudo positive}
  \index{Pseudo positive!Initial data}%
  \index{Initial data!Pseudo positive}%
  \index{Initial data!Regular!Pseudo positive}%
  if they are either positive or almost positive, written $\mfI\in \mB_{\mrIX,\ropp}[V]$.
  \index{$\a$Aa@Notation!Sets of regular initial data!mBIXppV@$\mB_{\mrIX,\ropp}[V]$}%
  For $\mft\in \{+,\roap,\ropp\}$ and $\mfs\in \{\iso,\roLRS,\rogen\}$,
  \[
  \mB_{\mrIX,\mft}^{\mfs}[V]:=\mB_{\mrIX,\mft}[V]\cap \mB_{\mrIX}^{\mfs}[V],\ \ \
  {}^{\rosc}\mB_{\mrIX,\mft}^{\mfs}[V]:=\mB_{\mrIX,\mft}[V]\cap {}^{\rosc}\mB_{\mrIX}^{\mfs}[V].
  \]
  \index{$\a$Aa@Notation!Sets of regular initial data!mBIXtsV@$\mB_{\mrIX,\mft}^\mfs[V]$}%
  Finally, ${}^{\rosc}\mfB_{\mrIX,\mft}^{\mfs}[V]$
  \index{$\a$Aa@Notation!Sets of isometry classes of regular initial data!scmfBIXtsV@${}^{\rosc}\mfB_{\mrIX,\mft}^\mfs[V]$}%
  denotes the set of isometry classes of elements in ${}^{\rosc}\mB_{\mrIX,\mft}^{\mfs}[V]$. 
\end{definition}
\begin{remark}
  Let $0\leq V\in \mfP_{\a_V}^{0}$ for some $\a_V\in (0,1/3)$. Then $\mB_{\mrIX,+}[V]\subset \mB_{\mrIX,\roap}[V]$, so that
  $\mB_{\mrIX,\roap}[V]=\mB_{\mrIX,\ropp}[V]$; see Corollary~\ref{cor:ap eq pp}.
\end{remark}
Due to Lemmas~\ref{lemma:BianchiAdevelopment} and \ref{lemma:Bianchi IX remainder}, if $\mfI\in \mB_{\mrIX,\ropp}[V]$, then there is a $t_1\in J$ such
that $\theta(t)>0$ and $\theta_t(t)<0$ for $t\leq t_1$. Developments corresponding to initial data in $\mB_{\mrIX,\ropp}[V]$ therefore arise from the
following class of initial data.
\begin{definition}\label{def:plus ap pp nd}
  Let $0\leq V\in C^{\infty}(\ro)$, $\mfs\in \{\iso,\roLRS,\rogen\}$, $\mft\in \{+,\roap,\ropp\}$ and $\mfI\in \mB_{\mrIX,\mft}^{\mfs}[V]$. Let $J$ and
  $\theta$ be the existence interval and mean curvature, respectively, of the corresponding development, see Proposition~\ref{prop:unique max dev}.
  Let, moreover, $t_0$ be as in Definition~\ref{def:BianchiAdevelopment}. If $\theta(t_0)>0$ and $\theta_t(t)<0$ for all $t\leq t_0$, then
  $\mfI$ are said to be \textit{non-degenerate},
  \index{Non-degenerate!Initial data}%
  \index{Initial data!Non-degenerate}%
  \index{Initial data!Regular!Non-degenerate}%
  written $\mfI\in \mB_{\mrIX,\mft}^{\mfs,\rond}[V]$. 
  \index{$\a$Aa@Notation!Sets of regular initial data!mBIXtsndV@$\mB_{\mrIX,\mft}^{\mfs,\rond}[V]$}%
  The corresponding set of simply connected initial data sets is denoted ${}^{\rosc}\mB_{\mrIX,\mft}^{\mfs,\rond}[V]$,
  \index{$\a$Aa@Notation!Sets of regular initial data!scmBIXtsndV@${}^{\rosc}\mB_{\mrIX,\mft}^{\mfs,\rond}[V]$}%
  and ${}^{\rosc}\mfB_{\mrIX,\mft}^{\mfs,\rond}[V]$
  \index{$\a$Aa@Notation!Sets of isometry classes of regular initial data!scmfBIXtsndV@${}^{\rosc}\mfB_{\mrIX,\mft}^{\mfs,\rond}[V]$}%
  the isometry classes thereof. 
\end{definition}
The reason we introduce the notion of non-degenerate initial data is that it plays an important role in the definition of a smooth structure on
the set of isometry classes of developments in the Bianchi IX setting.

\subsection{Isometry classes of developments}\label{ssection:iso class dev}

Next, it is convenient to introduce the following terminology.
\begin{definition}\label{def:mDV}
  Let $V\in C^{\infty}(\ro)$. Let $\mfs\in \{\iso,\roLRS,\roper,\rogen\}$ and fix a Bianchi class A type $\mfT$. Then the set of isometry
  classes of developments $\mD[V](\mfI)$ arising from $\mfI\in {}^{\rosc}\mB_{\mfT}^{\mfs}[V]$ with strictly positive mean curvature
  is denoted ${}^{\rosc}\mfD_{\mfT}^{\mfs}[V]$.
  \index{$\a$Aa@Notation!Sets of isometry classes of developments!scmfDTsV@${}^{\rosc}\mfD_{\mfT}^{\mfs}[V]$}%
  If $V\geq 0$ and $\mft\in \{+,\roap,\ropp\}$, the set of isometry classes of developments $\mD[V](\mfI)$ for
  $\mfI\in {}^{\rosc}\mB_{\mrIX,\mft}^{\mfs}[V]$ is denoted ${}^{\rosc}\mfD_{\mrIX,\mft}^{\mfs}[V]$.
  \index{$\a$Aa@Notation!Sets of isometry classes of developments!scmfDIXtsV@${}^{\rosc}\mfD_{\mrIX,\mft}^{\mfs}[V]$}%
  The set of isometry classes of developments $\mD[V](\mfI)$ for $\mfI\in {}^{\rosc}\mB_{\mrI}^{\iso}[V]$ with a
  crushing singularity is denoted ${}^{\rosc}\mfD_{\mrI,\roc}^{\iso}[V]$.
  \index{$\a$Aa@Notation!Sets of isometry classes of developments!scmfDIcisoV@${}^{\rosc}\mfD_{\mrI,\roc}^{\iso}[V]$}%
\end{definition}

\section{Parametrising isometry classes of developments}\label{section:parametrising iso class dev}

We introduce notation for isometry classes of developments in Definition~\ref{def:mDV}. In order to discuss the regularity of the map from isometry
classes of developments to isometry classes of data on the singularity, we next note that ${}^{\rosc}\mfD_\mfT^\mfs[V]$ can be given the structure of a
smooth manifold. The relevant smooth structure arises from the natural smooth structures on sets of isometry classes of initial data with fixed mean
curvature. We therefore first introduce the following terminology.
\begin{definition}\label{def:id fixed vartheta zero}
  Fix $V\in C^\infty(\ro)$, $\mfs\in \{\iso,\roLRS,\roper,\rogen\}$, $\vartheta_0\in\ro$ and a Bianchi class A type $\mfT$. Then
  $\mB_{\mfT}^{\mfs}[V](\vartheta_0)$,
  \index{$\a$Aa@Notation!Sets of regular initial data!mBTsVthz@$\mB_{\mfT}^{\mfs}[V](\vartheta_0)$}%
  ${}^{\rosc}\mB_{\mfT}^{\mfs}[V](\vartheta_0)$
  \index{$\a$Aa@Notation!Sets of regular initial data!scmBTsV@${}^{\rosc}\mB_{\mfT}^{\mfs}[V](\vartheta_0)$}%
  and ${}^{\rosc}\mfB_{\mfT}^{\mfs}[V](\vartheta_0)$
  \index{$\a$Aa@Notation!Sets of isometry classes of regular initial data!scmfBTsVthz@${}^{\rosc}\mfB_{\mfT}^{\mfs}[V](\vartheta_0)$}%
  denote the subsets of $\mB_{\mfT}^{\mfs}[V]$, ${}^{\rosc}\mB_{\mfT}^{\mfs}[V]$ and ${}^{\rosc}\mfB_{\mfT}^{\mfs}[V]$ with mean curvature $\vartheta_0$
  respectively. Similarly, if $\vartheta_0>0$ and $\mft\in\{+,\roap,\ropp\}$, then $\mB_{\mrIX,\mft}^{\mfs,\rond}[V](\vartheta_0)$,
  \index{$\a$Aa@Notation!Sets of regular initial data!mBIXtsndVthz@$\mB_{\mrIX,\mft}^{\mfs,\rond}[V](\vartheta_0)$}%
  ${}^{\rosc}\mB_{\mrIX,\mft}^{\mfs,\rond}[V](\vartheta_0)$
  \index{$\a$Aa@Notation!Sets of regular initial data!scmBIXtsndVthz@${}^{\rosc}\mB_{\mrIX,\mft}^{\mfs,\rond}[V](\vartheta_0)$}%
  and ${}^{\rosc}\mfB_{\mrIX,\mft}^{\mfs,\rond}[V](\vartheta_0)$
  \index{$\a$Aa@Notation!Sets of isometry classes of regular initial data!scmfBIXtsndVthz@${}^{\rosc}\mfB_{\mrIX,\mft}^{\mfs,\rond}[V](\vartheta_0)$}%
  denote the subsets of $\mB_{\mrIX,\mft}^{\mfs,\rond}[V]$, ${}^{\rosc}\mB_{\mrIX,\mft}^{\mfs,\rond}[V]$ and
  ${}^{\rosc}\mfB_{\mrIX,\mft}^{\mfs,\rond}[V]$ with mean curvature $\vartheta_0$ respectively. In what follows it is also
  convenient to introduce the notation ${}^{\rosc}\mB_{\mrI,\rond}^{\iso}[V]$
  \index{$\a$Aa@Notation!Sets of regular initial data!scmBIndisoV@${}^{\rosc}\mB_{\mrI,\rond}^{\iso}[V]$}%
  and ${}^{\rosc}\mfB_{\mrI,\rond}^{\iso}[V]$
  \index{$\a$Aa@Notation!Sets of isometry classes of regular initial data!scmfBIndisoV@${}^{\rosc}\mfB_{\mrI,\rond}^{\iso}[V]$}%
  for the subset of ${}^{\rosc}\mB_{\mrI}^{\iso}[V]$ and ${}^{\rosc}\mfB_{\mrI}^{\iso}[V]$, respectively, with $\bphi_1\neq 0$.
  ${}^{\rosc}\mB_{\mrI,\rond}^{\iso}[V](\vartheta_0)$
  \index{$\a$Aa@Notation!Sets of regular initial data!scmBIndisoVthz@${}^{\rosc}\mB_{\mrI,\rond}^{\iso}[V](\vartheta_0)$}%
  and ${}^{\rosc}\mfB_{\mrI,\rond}^{\iso}[V](\vartheta_0)$
  \index{$\a$Aa@Notation!Sets of isometry classes of regular initial data!scmfBIndisoVthz@${}^{\rosc}\mfB_{\mrI,\rond}^{\iso}[V](\vartheta_0)$}%
  are the corresponding subsets with mean curvature $\vartheta_0$. 
\end{definition}
\begin{remark}
  The sets ${}^{\rosc}\mfB_{\mfT}^{\mfs}[V](\vartheta_0)$ and ${}^{\rosc}\mfB_{\mrI,\rond}^{\iso}[V](\vartheta_0)$ can be parametrised by smooth manifolds;
  see Remark~\ref{remark:par iso id fixed mc}. If $\a_V\in (0,1/3)$ and $0\leq V\in \mfP_{\a_V}^1$, then the sets
  ${}^{\rosc}\mfB_{\mrIX,\mft}^{\mfs,\rond}[V](\vartheta_0)$ can be parametrised by smooth manifolds, see Remark~\ref{remark:par iso id fixed mc} and
  Lemma~\ref{lemma:PimfT well def}. From now on, we tacitly assume that the above sets have been endowed with the corresponding smooth structures.
\end{remark}
In case $(\mfT,\mfs)\neq(\mrI,\iso)$, $\mfT\neq\mrIX$ and $0\leq V\in C^{\infty}(\ro)$, developments of initial data $\mfI\in\mB_\mfT^\mfs[V]$ are such
that $\theta_t<0$; cf. Lemma~\ref{lemma:BianchiAdevelopment}. Moreover, an element of ${}^{\rosc}\mfB_{\mfT}^{\mfs}[V](\vartheta_0)$ uniquely
determines an isometry class of developments; cf. Proposition~\ref{prop:iso id to iso sol}. These basic observations can be improved to yield the
following conclusion. 
\begin{lemma}\label{lemma:dev smo str nisoI nIX}
  Let $0\leq V\in C^{\infty}(\ro)$, $\vartheta\in I_V:=([3\inf_s V(s)]^{1/2},\infty)$ and fix a $(\mfT,\mfs)$ such that
  $(\mfT,\mfs)\neq(\mrI,\iso)$ and $\mfT\neq\mrIX$. Then the map taking
  $\mfI\in {}^{\rosc}\mB_{\mfT}^{\mfs}[V](\vartheta)$ to $\md[V](\mfI)$ induces an injective map
  \begin{equation}\label{eq:iota vartheta def}
    \iota_{\vartheta}:{}^{\rosc}\mfB_{\mfT}^{\mfs}[V](\vartheta)\rightarrow {}^{\rosc}\mfD_\mfT^\mfs[V].
  \end{equation}
  Moreover, ${}^{\rosc}\mfD_\mfT^\mfs[V]$ can be endowed with a topology and a smooth structure such that the following holds: the image of
  $\iota_{\vartheta}$ is an open subset of ${}^{\rosc}\mfD_\mfT^\mfs[V]$ for all $\vartheta\in I_V$; $\iota_{\vartheta}$ is a diffeomorphism onto its image
  for all $\vartheta\in I_V$; and the union of the images of the $\iota_{\vartheta}$ equals ${}^{\rosc}\mfD_\mfT^\mfs[V]$. 
\end{lemma}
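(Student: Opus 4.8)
\emph{Step 1: well-definedness and injectivity of $\iota_\vartheta$.}
Well-definedness is immediate from Propositions~\ref{prop:iso id to iso sol} and \ref{prop:unique max dev}. For injectivity I would use that, since $(\mfT,\mfs)\neq(\mrI,\iso)$ and $\mfT\neq\mrIX$, one has $\theta_t<0$ on every development of an element of ${}^{\rosc}\mB_\mfT^\mfs[V]$ (cf.\ Lemma~\ref{lemma:BianchiAdevelopment}), so that $\theta$ is a strictly monotone function on the existence interval and its level sets are the leaves of the geometrically canonical constant-mean-curvature foliation. Given $[\mfI_0],[\mfI_1]\in{}^{\rosc}\mfB_\mfT^\mfs[V](\vartheta)$ with $\iota_\vartheta([\mfI_0])=\iota_\vartheta([\mfI_1])$, choose representatives and an isometry $\Upsilon$ of $\md[V](\mfI_0)$ onto $\md[V](\mfI_1)$; since mean curvature is isometry invariant it must carry the hypersurface $\{\theta=\vartheta\}$ onto $\{\theta=\vartheta\}$, and since the scalar field depends on $t$ alone and $\Upsilon$ preserves $\phi$ and the time orientation it also matches $\phi$ and $\phi_t$ along these hypersurfaces. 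Restricting $\Upsilon$ then gives an isometry of the induced data, which (after the time-translation normalisation built into $\iota_\vartheta$) are $\mfI_0$ and $\mfI_1$; hence $[\mfI_0]=[\mfI_1]$ and $\iota_\vartheta$ is injective.

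\emph{Step 2: the target manifold, built by gluing.}
The plan for the smooth structure is the one sketched in the introduction. I would form the disjoint union $\widehat{\mfB}:=\bigsqcup_{\vartheta\in I_V}{}^{\rosc}\mfB_\mfT^\mfs[V](\vartheta)$ of the smooth manifolds supplied by Remark~\ref{remark:par iso id fixed mc}, and quotient it by the equivalence relation identifying two pairs exactly when the associated developments are isometric. The first thing to record is that, because $\theta_t<0$ and (by Remark~\ref{remark:improved dichotomy} and Lemma~\ref{lemma:BianchiAdevelopment}) each such development has a crushing singularity at its past endpoint and exists globally to the future, the mean curvature is a strictly decreasing diffeomorphism of the existence interval onto an interval $(\theta_{\inf},\infty)$; moreover (\ref{eq:ham con ito X}), together with $\bS\leq0$ away from Bianchi type IX, forces $\theta_{\inf}\geq[3\inf_sV(s)]^{1/2}$, so this interval lies in $I_V$. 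Thus every development attains mean curvatures in $I_V$ and the quotient maps bijectively onto ${}^{\rosc}\mfD_\mfT^\mfs[V]$; I would transport the quotient topology along this bijection and take as charts the inverses $\iota_\vartheta$ of the maps from the pieces of $\widehat{\mfB}$. Everything then hinges on the transition maps.

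\emph{Step 3: the Einstein-flow transition maps (the main obstacle).}
For $\vartheta,\vartheta'\in I_V$, let $U_{\vartheta,\vartheta'}$ be the set of $[\mfI]\in{}^{\rosc}\mfB_\mfT^\mfs[V](\vartheta)$ whose development attains mean curvature $\vartheta'$, and let $\Phi_{\vartheta'\vartheta}:U_{\vartheta,\vartheta'}\to{}^{\rosc}\mfB_\mfT^\mfs[V](\vartheta')$ send $[\mfI]$ to the isometry class of the data induced on the slice $\{\theta=\vartheta'\}$ of $\md[V](\mfI)$. I expect the crux of the whole lemma to be showing that $U_{\vartheta,\vartheta'}$ is open and that $\Phi_{\vartheta'\vartheta}$ is a diffeomorphism onto $U_{\vartheta',\vartheta}$ with inverse $\Phi_{\vartheta\vartheta'}$. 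The strategy: pass to honest representatives, where the Bianchi class A Einstein non-linear scalar field equations become a system of ODEs for the left invariant metric, the Weingarten map, $\phi$ and $\phi_t$; ODE theory gives smooth dependence on the data, the implicit function theorem (legitimate since $\theta_t<0$) produces a smooth arrival-time function $s(\mfI)$ with $\theta(s(\mfI))=\vartheta'$ on the open set of data for which $\vartheta'$ is attained, and the induced data at time $s(\mfI)$ then depend smoothly on $\mfI$. I would descend this to isometry classes via a local smooth section of the quotient ${}^{\rosc}\mB_\mfT^\mfs[V](\vartheta')\to{}^{\rosc}\mfB_\mfT^\mfs[V](\vartheta')$ coming from the construction of the smooth structure on isometry classes (Remark~\ref{remark:par iso id fixed mc}), using Propositions~\ref{prop:iso id to iso sol} and \ref{prop:unique max dev} to see the output depends only on $[\mfI]$; openness of $U_{\vartheta,\vartheta'}$ follows from $\theta_t<0$ and the same smooth dependence, and running the construction from the $\vartheta'$-slice instead (the development is recovered from any of its CMC leaves) yields $\Phi_{\vartheta\vartheta'}$ as a smooth two-sided inverse.

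\emph{Step 4: assembling the manifold.}
Granting Step 3, the rest is routine bookkeeping. The preimage in $\widehat{\mfB}$ of the image of $\iota_\vartheta$ meets the piece over $\vartheta'$ exactly in $U_{\vartheta',\vartheta}$, so each $\iota_\vartheta$ has open image; $\iota_\vartheta$ is injective by Step 1 and sends opens to opens because $q^{-1}(q(O))$ meets the piece over $\vartheta'$ in $\Phi_{\vartheta'\vartheta}(O\cap U_{\vartheta,\vartheta'})$, which is open; hence $\iota_\vartheta$ is a homeomorphism onto its open image. On overlaps $\iota_{\vartheta'}^{-1}\circ\iota_\vartheta=\Phi_{\vartheta'\vartheta}$ is a diffeomorphism, so the $\iota_\vartheta^{-1}$ form a smooth atlas, and the images of the $\iota_\vartheta$ cover ${}^{\rosc}\mfD_\mfT^\mfs[V]$ by the surjectivity noted in Step 2. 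Finally, since the mean curvature ranges $(\theta_{\inf},\infty)$ of any two developments overlap, any two isometry classes of developments lie in a single $\iota_\vartheta$-image, which is Hausdorff, so the space is Hausdorff; and it is covered by the countably many images $\iota_{\vartheta_n}$ with $\{\vartheta_n\}$ dense in $I_V$, hence second countable. This yields the asserted topology and smooth structure, with $\iota_\vartheta$ a diffeomorphism onto an open subset for every $\vartheta\in I_V$.
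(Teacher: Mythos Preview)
Your proof follows essentially the same architecture as the paper's: build ${}^{\rosc}\mfD_\mfT^\mfs[V]$ as a quotient of $\bigsqcup_{\vartheta\in I_V}{}^{\rosc}\mfB_\mfT^\mfs[V](\vartheta)$, with the Einstein-flow transition maps (your $\Phi_{\vartheta'\vartheta}$, the paper's $\Pi_{\mfT,\mfs}^{\vartheta_0,\vartheta_1}$ of Lemma~\ref{lemma:PimfT well def}) serving as gluing diffeomorphisms. Your Step~3 correctly identifies this as the crux and sketches exactly what the paper proves in Lemma~\ref{lemma:PimfT well def}: work in the $\sfB_\mfT^\mfs(\vartheta)$ representation, use smooth dependence of the ODE flow, and apply the implicit function theorem (legitimate because $\theta_t<0$) to get a smooth arrival-time function; then descend to isometry classes. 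The Hausdorff and second-countability arguments in Step~4 are likewise the same as the paper's, stated somewhat more concisely.

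There is one genuine but small gap in your Step~1. You deduce that $\Upsilon$ carries $\{\theta_0=\vartheta\}$ onto $\{\theta_1=\vartheta\}$ from ``mean curvature is isometry invariant''. That only tells you the image is \emph{some} hypersurface of mean curvature $\vartheta$ in the second development; it does not by itself force the image to be a leaf of the given Gaussian foliation (uniqueness of noncompact CMC hypersurfaces is not automatic, and the $\phi$-level-set argument you allude to fails wherever $\phi_t=0$). The paper fills this by noting that, since the existence interval is $(0,\infty)$ with a crushing singularity at $t=0$ (Lemma~\ref{lemma:BianchiAdevelopment}), the coordinate $t$ can be invariantly characterised as the maximal length of a past-directed timelike curve from a point; this is preserved by any isometry, so $\Upsilon$ preserves the constant-$t$ leaves, and strict monotonicity of $\theta$ then pins down the $\vartheta$-leaf. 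With that one addition your argument goes through and matches the paper's.
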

\begin{proof}
  The proof can be found in Section~\ref{section:Id fix mean curv}.
\end{proof}
In the case of Bianchi type IX, the following holds. 
\begin{lemma}\label{lemma:dev smo str IX}
  Let $0\leq V\in \mfP_{\a_V}^1$, where $\a_V\in (0,1/3)$; $\mft\in \{+,\roap,\ropp\}$; and $0<\vartheta\in\ro$. Then the map taking
  $\mfI\in {}^{\rosc}\mB_{\mrIX,\mft}^{\mfs,\rond}[V](\vartheta)$ to $\md[V](\mfI)$ induces an injective map
  \begin{equation}\label{eq:iota vartheta def IX}
    \iota_{\vartheta}:{}^{\rosc}\mfB_{\mrIX,\mft}^{\mfs,\rond}[V](\vartheta)\rightarrow {}^{\rosc}\mfD_{\mrIX,\mft}^\mfs[V].
  \end{equation}
  Moreover, ${}^{\rosc}\mfD_{\mrIX,\mft}^{\mfs}[V]$ can be endowed with a topology and a smooth structure such that the following holds: the image of
  $\iota_{\vartheta}$ is an open subset of ${}^{\rosc}\mfD_{\mrIX,\mft}^\mfs[V]$ for all $\vartheta>0$; $\iota_{\vartheta}$ is a diffeomorphism onto its image
  for all $\vartheta>0$; and the union of the images of the $\iota_{\vartheta}$ equals ${}^{\rosc}\mfD_{\mrIX,\mft}^\mfs[V]$. 
\end{lemma}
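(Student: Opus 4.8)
The plan is to follow the proof of Lemma~\ref{lemma:dev smo str nisoI nIX} in Section~\ref{section:Id fix mean curv}, making one structural change dictated by Bianchi type IX: the mean curvature $\theta$ need not be globally monotone, so the global inequality $\theta_t<0$ used there is replaced by the monotonicity of $\theta$ on the past interval furnished by Lemmas~\ref{lemma:BianchiAdevelopment} and \ref{lemma:Bianchi IX remainder} for pseudo positive initial data. Note first that, under the hypotheses $\a_V\in(0,1/3)$ and $0\leq V\in\mfP_{\a_V}^1$, the sets ${}^{\rosc}\mfB_{\mrIX,\mft}^{\mfs,\rond}[V](\vartheta)$ are smooth manifolds by Remark~\ref{remark:par iso id fixed mc} and Lemma~\ref{lemma:PimfT well def}; these are to serve as the charts for ${}^{\rosc}\mfD_{\mrIX,\mft}^{\mfs}[V]$. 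That $\iota_\vartheta$ is a well defined map into ${}^{\rosc}\mfD_{\mrIX,\mft}^{\mfs}[V]$, independent of the chosen representative, is immediate from Proposition~\ref{prop:iso id to iso sol}.

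First I would prove that each $\iota_\vartheta$ is injective. If $F$ is an isometry from $\md[V](\mfI_0)$ to $\md[V](\mfI_1)$ with $\mfI_0,\mfI_1\in{}^{\rosc}\mB_{\mrIX,\mft}^{\mfs,\rond}[V](\vartheta)$, then, arguing as in the proof of Lemma~\ref{lemma:dev smo str nisoI nIX}, $F$ preserves the foliation by hypersurfaces of spatial homogeneity, the mean curvature along it, and the crushing singularity. By Lemmas~\ref{lemma:BianchiAdevelopment} and \ref{lemma:Bianchi IX remainder} each of the two developments has a maximal past interval on which $\theta>0$ and $\theta_t<0$; by non-degeneracy the slice carrying $\mfI_\a$ lies in this interval, on which $\theta$ is strictly decreasing and hence attains the value $\vartheta$ exactly once. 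Therefore $F$ carries the distinguished $\vartheta$-slice of one development to that of the other and restricts to an isometry between $\mfI_0$ and $\mfI_1$. Next I would verify that the images of the $\iota_\vartheta$ cover ${}^{\rosc}\mfD_{\mrIX,\mft}^{\mfs}[V]$: given $\md[V](\mfI)$ with $\mfI\in{}^{\rosc}\mB_{\mrIX,\mft}^{\mfs}[V]$, Lemmas~\ref{lemma:BianchiAdevelopment} and \ref{lemma:Bianchi IX remainder} yield a time $t_1$ in the existence interval with $\theta>0$ and $\theta_t<0$ to its past; the induced data $\mfI_1$ on the $t_1$-slice is non-degenerate and simply connected, is of symmetry type $\mfs$ by Proposition~\ref{prop:unique max dev}, and is of type $\mft$ since membership in $\mB_{\mrIX,+}[V]$, $\mB_{\mrIX,\roap}[V]$ and $\mB_{\mrIX,\ropp}[V]$ depends only on the development and not on the choice of initial slice; hence $[\md[V](\mfI)]=\iota_{\theta(t_1)}([\mfI_1])$.

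It then remains to glue the charts. For $\vartheta,\vartheta'>0$ I would set $U_{\vartheta,\vartheta'}:=\iota_\vartheta^{-1}(\mathrm{im}\,\iota_\vartheta\cap\mathrm{im}\,\iota_{\vartheta'})$, so that $[\mfI]\in U_{\vartheta,\vartheta'}$ precisely when the monotone past region of $\md[V](\mfI)$ reaches mean curvature $\vartheta'$; openness of $U_{\vartheta,\vartheta'}$ follows from continuous dependence of the Bianchi ODE flow on the data together with the fact that $\theta_t<0$ is an open condition. On $U_{\vartheta,\vartheta'}$ the transition map $\iota_{\vartheta'}^{-1}\circ\iota_\vartheta$ is the evolution map taking the data on the $\vartheta$-slice to the data on the $\vartheta'$-slice; since $\theta_t\neq 0$ there, the implicit function theorem gives the time of the $\vartheta'$-slice smoothly in terms of the data, hence the induced data smoothly, and $\iota_\vartheta^{-1}\circ\iota_{\vartheta'}$ furnishes the inverse. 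The standard construction of gluing smooth manifolds along open subsets via diffeomorphic transition maps then produces a topology and smooth structure on ${}^{\rosc}\mfD_{\mrIX,\mft}^{\mfs}[V]$ for which each $\mathrm{im}\,\iota_\vartheta$ is open and each $\iota_\vartheta$ is a diffeomorphism onto its image, and by the covering statement these images exhaust ${}^{\rosc}\mfD_{\mrIX,\mft}^{\mfs}[V]$.

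The hard part will be verifying that these transition maps are smooth at the level of isometry classes, not merely at the level of parametrised initial data: smooth dependence of the ODE flow and of the slice time on the data is routine, but transporting it to a diffeomorphism between the manifolds ${}^{\rosc}\mfB_{\mrIX,\mft}^{\mfs,\rond}[V](\vartheta)$ and ${}^{\rosc}\mfB_{\mrIX,\mft}^{\mfs,\rond}[V](\vartheta')$ requires the compatibility of the parametrising charts with the flow. This is precisely the content to be imported from Section~\ref{section:Id fix mean curv}, together with the Bianchi type IX parametrisation provided by Lemma~\ref{lemma:PimfT well def} and Remark~\ref{remark:par iso id fixed mc}, which is why the hypotheses $\a_V\in(0,1/3)$ and $V\in\mfP_{\a_V}^1$ are needed. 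The only genuinely new ingredient in the Bianchi type IX case is the replacement of global monotonicity of $\theta$ by its monotonicity on the past region supplied by pseudo positivity and non-degeneracy.
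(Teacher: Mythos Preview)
Your proposal is correct and takes essentially the same approach as the paper, which simply states that the proof is similar to that of Lemma~\ref{lemma:dev smo str nisoI nIX} and leaves the details to the reader. You have supplied precisely those details, correctly identifying that the only modification needed in the Bianchi type IX setting is to replace global monotonicity of $\theta$ by the monotonicity on the past interval guaranteed by Lemmas~\ref{lemma:BianchiAdevelopment} and \ref{lemma:Bianchi IX remainder} together with the non-degeneracy condition, and correctly invoking Lemma~\ref{lemma:PimfT well def} for the smoothness of the transition maps $\Pi_{\mrIX,\mfs,\mft}^{\vartheta_0,\vartheta_1}$ between isometry classes.
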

\begin{remark}
  In case $\mft=+$, it is not necessary to assume that $V\in \mfP_{\a_V}^1$, it is sufficient to assume that $V\in C^{\infty}(\ro)$.
\end{remark}
\begin{proof}
  The proof can be found in Section~\ref{section:Id fix mean curv}.
\end{proof}
The case of isotropic Bianchi type I solutions is slightly more complicated. This is due to the existence of solutions whose metrics asymptote to
metrics of the form (\ref{eq:g dS}). Depending on the potential, there could in principle be infinitely many distinct solution with asymptotics of
this type. None of the corresponding solutions have a crushing singularity. On the other hand, in order for this to happen, the potential has to
be quite exotic; $V'$ has to have infinitely many distinct zeros with infinitely many distinct values of $V$. In the case of isotropic Bianchi type I
solutions, we, for these reasons, consider a more restricted class of potentials. 
\begin{definition}\label{def:mfPdef}
  Let $V\in C^{\infty}(\ro)$. If $V(s)$ converges to a finite number as $s\rightarrow \pm\infty$, denote the limit by $v_{\infty,\pm}$. If $V(s)$ does
  not converge to a finite number as $s\rightarrow \pm\infty$, let $v_{\infty,\pm}:=0$. Define
  \[
  v_{\max}(V):=\sup(\{v_{\infty,+},v_{\infty,-}\}\cup\{V(s_0)\, |\, s_0\in\ro,\, V'(s_0)=0\}).
  \]
  \index{$\a$Aa@Notation!Functions!vmax@$v_{\max}$}%
  Let $\mfP_{\ropar}$
  \index{$\a$Aa@Notation!Spaces of potentials!mfPpar@$\mfP_{\ropar}$}%
  denote the set of $V\in C^{\infty}(\ro)$ such that $V(s)\geq 0$ for all $s\in\ro$; $V'$ is bounded on every interval on which $V$
  is bounded; $V'(s)$ tends to a limit (finite or infinite) as $s\rightarrow \infty$ and as $s\rightarrow -\infty$; and $v_{\max}(V)<\infty$.
\end{definition}
\begin{remark}
  The set $\mfP_{\ropar}$ includes, e.g., the following three classes of potentials: non-negative polynomials; non-negative smooth functions such that
  $V>0$ outside a compact set and such that $V'/V$ converges to a non-zero limit as $s\rightarrow\pm\infty$; bounded non-negative smooth functions such
  that $V'(s)\rightarrow 0$ as $s\rightarrow\pm\infty$. 
\end{remark}
Under these additional assumptions on the potential, it is sufficient to use one fixed mean curvature to parametrise isometry classes of developments.
In fact, the following holds.
\begin{lemma}\label{lemma:parametr by single const th surf}
  Assume that the conditions of Lemma~\ref{lemma:dev smo str nisoI nIX} hold. Assume, in addition, that $V\in\mfP_{\ropar}$ and that
  $\vartheta>[3v_{\max}(V)]^{1/2}$. Then $\iota_{\vartheta}$ introduced in (\ref{eq:iota vartheta def}) is surjective. In particular,
  ${}^{\rosc}\mfB_{\mfT}^{\mfs}[V](\vartheta)$ and ${}^{\rosc}\mfD_\mfT^\mfs[V]$ are diffeomorphic.
\end{lemma}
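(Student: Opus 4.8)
By Lemma~\ref{lemma:dev smo str nisoI nIX}, $\iota_\vartheta$ is already an injective diffeomorphism onto an open subset of ${}^{\rosc}\mfD_\mfT^\mfs[V]$, and the images of the maps $\iota_{\vartheta'}$, $\vartheta'\in I_V$, cover ${}^{\rosc}\mfD_\mfT^\mfs[V]$; so it suffices to prove that $\iota_\vartheta$ is surjective, i.e.\ that \emph{every} development $\md[V](\mfI)$ with $\mfI\in{}^{\rosc}\mB_\mfT^\mfs[V]$ and strictly positive mean curvature possesses a constant-mean-curvature leaf of mean curvature exactly $\vartheta$; the data induced on such a leaf then lie in ${}^{\rosc}\mB_\mfT^\mfs[V](\vartheta)$ by Proposition~\ref{prop:unique max dev}, and $\iota_\vartheta$ maps its isometry class to that of the development. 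For the admissible pairs $(\mfT,\mfs)$ and $0\le V$, along such a development $\theta$ is a strictly decreasing $C^1$ function on the open existence interval $J=(t_-,t_+)$ with $\theta>0$ and $\theta(t)\to\infty$ as $t\downarrow t_-$ (these developments have crushing big bang singularities; cf.\ Lemma~\ref{lemma:BianchiAdevelopment}). Hence $\theta(J)=(\theta_\infty,\infty)$ with $\theta_\infty:=\lim_{t\uparrow t_+}\theta(t)\ge 0$, and the lemma reduces to the \emph{uniform} bound $\theta_\infty^2\le 3v_{\max}(V)$: indeed $\vartheta>[3v_{\max}(V)]^{1/2}\ge\theta_\infty$ then places $\vartheta$ in $\theta(J)$, and strict monotonicity of $\theta$ supplies the unique leaf.

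To establish the bound, note first $t_+=\infty$: if $t_+<\infty$ then by maximality (Remark~\ref{remark:improved dichotomy}) $|\theta|$ would be unbounded as $t\uparrow t_+$, impossible for a positive decreasing $\theta$. Assume $\theta_\infty>0$ (otherwise there is nothing to prove). In this Gaussian spatially homogeneous gauge the scalar field satisfies $\ddot\phi+\theta\dot\phi+V'(\phi)=0$, so $E:=\tfrac12\dot\phi^2+V(\phi)$ obeys $\dot E=-\theta\dot\phi^2\le 0$; since $E\ge 0$ it converges, to some $E_\infty\ge0$, and, fixing $t_0\in J$, $\int_{t_0}^{\infty}\theta\dot\phi^2\,dt<\infty$, hence $\int_{t_0}^{\infty}\dot\phi^2\,dt<\infty$ because $\theta\ge\theta_\infty>0$; in particular $V\circ\phi$ stays bounded along the trajectory. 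The key claim — and the only place the hypothesis $V\in\mfP_{\ropar}$ enters — is that $\limsup_{t\to\infty}V(\phi(t))\le v_{\max}(V)$. If $\phi$ stays bounded, then $\ddot\phi=-\theta\dot\phi-V'(\phi)$ is bounded ($\theta$ and $\dot\phi$ bounded, $V'$ bounded on the bounded range of $\phi$ since $V\in\mfP_{\ropar}$), so $\int_{t_0}^{\infty}\dot\phi^2\,dt<\infty$ forces $\dot\phi\to0$; then $V(\phi(t))=E-\tfrac12\dot\phi^2\to E_\infty$, and the $\omega$-limit set of $\phi$ is a connected subset of $\{V=E_\infty\}$, hence either a single critical point of $V$ or a nondegenerate interval on which $V\equiv E_\infty$ (if it were a single point $\phi_*$ with $V'(\phi_*)\ne0$, then $\ddot\phi\to-V'(\phi_*)\ne0$ and $\int_{t_0}^{T}\ddot\phi\,dt=\dot\phi(T)-\dot\phi(t_0)$ would diverge, contradicting $\dot\phi\to0$); in either case $E_\infty\le v_{\max}(V)$. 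If $\phi$ is unbounded, say $\phi(t_n)\to+\infty$, then $V$ is bounded near $+\infty$ (as $V\circ\phi$ is bounded), and since $V'$ has a limit at $+\infty$, either $V$ converges there, giving $\limsup_{t\to\infty}V(\phi(t))\le v_{\infty,+}\le v_{\max}(V)$, or $V$ has infinitely many local maxima near $+\infty$ with values tending to $\limsup_{s\to+\infty}V(s)$, so $v_{\max}(V)\ge\limsup_{s\to+\infty}V(s)\ge\limsup_{t\to\infty}V(\phi(t))$; the case $\phi$ unbounded below is symmetric. This proves the claim.

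Finally combine the claim with the Raychaudhuri form of the mean-curvature evolution equation, $\dot\theta=-\tfrac13\theta^2-2|\sigma|_{\bge}^2-\dot\phi^2+V(\phi)$ ($\sigma$ the trace-free part of the Weingarten map), so that $-\dot\theta\ge\tfrac13\theta^2-V(\phi)\ge\tfrac13\theta_\infty^2-V(\phi)$. Were $\theta_\infty^2>3v_{\max}(V)$, then, since $\limsup_{t\to\infty}V(\phi(t))\le v_{\max}(V)<\tfrac13\theta_\infty^2$, there would be $\delta>0$ and $T\in J$ with $-\dot\theta(t)\ge\delta$ for all $t\ge T$, forcing $\theta(t)\to-\infty$ and contradicting $\theta\ge\theta_\infty>0$. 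Hence $\theta_\infty^2\le 3v_{\max}(V)$, which completes the reduction of the first paragraph; $\iota_\vartheta$ is then a bijection, and by Lemma~\ref{lemma:dev smo str nisoI nIX} a diffeomorphism of ${}^{\rosc}\mfB_\mfT^\mfs[V](\vartheta)$ onto ${}^{\rosc}\mfD_\mfT^\mfs[V]$.

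\textbf{Main obstacle.} The heart of the argument is the uniform bound $\limsup_{t\to\infty}V(\phi(t))\le v_{\max}(V)$, especially the unbounded-$\phi$ case; this is exactly what the defining conditions of $\mfP_{\ropar}$ (finiteness of $v_{\max}(V)$, boundedness of $V'$ where $V$ is bounded, and existence of limits of $V'$ at $\pm\infty$) are tailored to supply, via the energy monotonicity $\dot E=-\theta\dot\phi^2$. A secondary point is to read off from the earlier sections the precise sign conventions in the mean-curvature evolution equation so that the coefficient of $V(\phi)$ is genuinely $+1$ as used above; the de Sitter / cosmological-constant case, where $\theta_\infty^2=3v_{\max}(V)$, shows the strict inequality $\vartheta>[3v_{\max}(V)]^{1/2}$ cannot be relaxed.
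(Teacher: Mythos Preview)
Your proof is correct and follows the same strategy as the paper: the paper's one-line argument cites Lemmas~\ref{lemma:BianchiAdevelopment} and \ref{lemma:bth large enough}, and what you have written is essentially a self-contained reproof of the latter (which in the paper defers to \cite{stab} for the future asymptotics of $\theta$ and $\phi$). Two minor points. First, in the unbounded-$\phi$ case you should observe that $\dot\phi\to0$ still holds: the range of $\phi|_{[t_0,\infty)}$ is an interval on which $V$ is bounded (by $E(t_0)$), so the $\mfP_{\ropar}$ hypothesis forces $V'$ to be bounded there, making $\ddot\phi$ bounded; this gives $V(\phi(t))\to E_\infty$, and since $\phi(t_n)\to+\infty$ makes $E_\infty$ a cluster value of $V$ at $+\infty$, Lemma~\ref{lemma:vmax bds limsup} then yields $E_\infty\le v_{\max}(V)$ --- without this, your claim ``$\limsup_{t\to\infty}V(\phi(t))\le v_{\infty,+}$'' is not justified when $\phi$ oscillates. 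Second, the coefficient of $|\sigma|_{\bge}^2$ in your Raychaudhuri equation should be $1$ rather than $2$ (compare (\ref{eq:Riczzeq})), though this does not affect the inequality you use.
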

\begin{proof}
  The surjectivity is an immediate consequence of Lemmas~\ref{lemma:dev smo str nisoI nIX}, \ref{lemma:BianchiAdevelopment} and
  \ref{lemma:bth large enough}. 
\end{proof}
Finally, we can state the result analogous to Lemmas~\ref{lemma:dev smo str nisoI nIX} and \ref{lemma:dev smo str IX} in the case of isotropic
Bianchi type I solutions.
\begin{lemma}\label{lemma:dev smo str isoI}
  Let $V\in\mfP_{\ropar}$ and $\vartheta\in I_V:=([3v_{\max}(V)]^{1/2},\infty)$. Then the map taking $\mfI\in {}^{\rosc}\mB_{\mrI,\rond}^{\iso}[V](\vartheta)$
  to $\md[V](\mfI)$ induces an injective map
  \begin{equation}\label{eq:iota vartheta nd def}
    \iota_{\vartheta}:{}^{\rosc}\mfB_{\mrI,\rond}^{\iso}[V](\vartheta)\rightarrow {}^{\rosc}\mfD_{\mrI,\roc}^{\iso}[V].
  \end{equation}
  Moreover, ${}^{\rosc}\mfD_{\mrI,\roc}^{\iso}[V]$ can be endowed with a topology and a smooth structure such that the following holds: the image of
  $\iota_{\vartheta}$ is an open subset of ${}^{\rosc}\mfD_{\mrI,\roc}^{\iso}[V]$ for all $\vartheta\in I_V$; $\iota_{\vartheta}$ is a diffeomorphism onto its
  image for all $\vartheta\in I_V$; and the union of the images of the $\iota_{\vartheta}$ equals ${}^{\rosc}\mfD_{\mrI,\roc}^{\iso}[V]$. 
\end{lemma}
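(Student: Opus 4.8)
The plan is to follow the template of Lemma~\ref{lemma:dev smo str nisoI nIX}: build the smooth structure on ${}^{\rosc}\mfD_{\mrI,\roc}^{\iso}[V]$ by gluing the fixed-mean-curvature manifolds ${}^{\rosc}\mfB_{\mrI,\rond}^{\iso}[V](\vartheta)$, $\vartheta\in I_V$, along the transition maps induced by the Einstein flow. There are three things to establish: (A) each $\iota_\vartheta$ is well defined --- i.e.\ every $\mfI\in{}^{\rosc}\mB_{\mrI,\rond}^{\iso}[V](\vartheta)$ with $\vartheta\in I_V$ has a development with a crushing singularity, and isometric data yield isometric developments --- and injective; (B) for $\vartheta_0,\vartheta_1\in I_V$ the Einstein flow induces a diffeomorphism between the open subsets of ${}^{\rosc}\mfB_{\mrI,\rond}^{\iso}[V](\vartheta_0)$ and ${}^{\rosc}\mfB_{\mrI,\rond}^{\iso}[V](\vartheta_1)$ consisting of the classes whose developments are non-degenerate at both mean curvatures; (C) the union of the images of the $\iota_\vartheta$, $\vartheta\in I_V$, is all of ${}^{\rosc}\mfD_{\mrI,\roc}^{\iso}[V]$. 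Granting these, one topologises ${}^{\rosc}\mfD_{\mrI,\roc}^{\iso}[V]$ by declaring each $\iota_\vartheta$ to be an open embedding; (B) says the resulting charts are smoothly compatible, (C) is coverage, and openness and the diffeomorphism property onto the image are then immediate. The quotient is Hausdorff and second countable since it is locally modelled on the manifolds ${}^{\rosc}\mfB_{\mrI,\rond}^{\iso}[V](\vartheta)$, cf.\ Remark~\ref{remark:par iso id fixed mc}.

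For (A) and (C) the key input is control of the mean curvature. Writing the metric of such a development as $g=-dt\otimes dt+a^2(t)\bge_0$ with $\bge_0$ a flat left invariant metric on $G$ (the isotropic Bianchi type~I form), one has $\theta=3\dot a/a$, the Hamiltonian constraint reads $\theta^2=\tfrac32\phi_t^2+3V\circ\phi$ and $\theta_t=-\tfrac32\phi_t^2\leq 0$; see Lemma~\ref{lemma:BianchiAdevelopment}. In particular $3V\circ\phi\leq\theta^2$ along the solution. For $V\in\mfP_{\ropar}$ and $\vartheta>[3v_{\max}(V)]^{1/2}$, Lemma~\ref{lemma:bth large enough} gives, on the part of the development where $\theta\geq\vartheta$, a lower bound $|\theta_t|\geq c\theta^2$ with $c=c(\vartheta,V)>0$; integrating backwards yields $\theta\to\infty$ in finite proper time, so the development has a crushing singularity and one may take $t_-=0$. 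The same lemma shows that as $t$ increases $\theta$ decreases to a limit $\theta_\infty\leq[3v_{\max}(V)]^{1/2}$, so $\theta$ attains every value of $I_V$, and each such value is attained at a single time: $\{\theta=\vartheta\}$ cannot contain an interval, since there $\phi_t$ and $V'\circ\phi$ would vanish and the constraint would force $\vartheta^2/3\leq v_{\max}(V)$, contradicting $\vartheta\in I_V$. Well-definedness of $\iota_\vartheta$ then follows using Proposition~\ref{prop:iso id to iso sol} (isometric data give isometric developments). For injectivity, an isometry of two such developments must carry the unique constant-mean-curvature leaf $\{\theta=\vartheta\}$ of one onto that of the other, the Gaussian/CMC foliation being canonically attached to the metric; after normalising the time translation so both leaves sit at $t=t_0$ it has the form $\chi\times\roId$ with $\chi$ an isometry of the induced data --- the argument used in the proof of Proposition~\ref{prop:iso id to iso sol}, see near Lemma~\ref{lemma:isometryinducesequal o P}. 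Finally (C) holds because, given $[(M,g,\phi)]\in{}^{\rosc}\mfD_{\mrI,\roc}^{\iso}[V]$, any $\vartheta\in I_V$ for which $\phi_t$ does not vanish on the $\{\theta=\vartheta\}$-leaf lies in the image of $\iota_\vartheta$, and such $\vartheta$ are dense in $I_V$ (the exceptional ones are the turning-point values of $\theta$, which are isolated).

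For (B), in the isotropic Bianchi type~I setting the evolution equations reduce to an ODE system for $(\phi,\phi_t)$, with $\theta$ (hence $a$ up to scale) determined by the Hamiltonian constraint, and with smooth dependence on the initial data; see Section~\ref{ssection:developments}. Given $\mfI$ with mean curvature $\vartheta_0$ whose development is non-degenerate at mean curvature $\vartheta_1$, the time $\tau(\mfI)$ at which $\theta=\vartheta_1$ is well defined, and since $\theta_t(\tau(\mfI))=-\tfrac32\phi_t(\tau(\mfI))^2\neq 0$ there it depends smoothly on $\mfI$ by the implicit function theorem. Evolving $\mfI$ to time $\tau(\mfI)$ therefore defines a smooth map from the corresponding open subset of ${}^{\rosc}\mB_{\mrI,\rond}^{\iso}[V](\vartheta_0)$ to ${}^{\rosc}\mB_{\mrI,\rond}^{\iso}[V](\vartheta_1)$; it commutes with isometries by Proposition~\ref{prop:iso id to iso sol}, hence descends to isometry classes, and the reverse flow provides a smooth inverse, giving the transition diffeomorphisms.

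The main obstacle is the mean-curvature control packaged above into Lemmas~\ref{lemma:BianchiAdevelopment} and \ref{lemma:bth large enough}: for $V\in\mfP_{\ropar}$ and $\vartheta>[3v_{\max}(V)]^{1/2}$ one must show that backward evolution from a non-degenerate isotropic Bianchi type~I datum with mean curvature $\vartheta$ always produces a crushing singularity (kinetic energy must come to dominate the potential), that $\theta$ decreases strictly through the range $\theta>[3v_{\max}(V)]^{1/2}$, and that the forward limit of $\theta$ does not exceed $[3v_{\max}(V)]^{1/2}$. This is precisely where the hypotheses $V\in\mfP_{\ropar}$ and $\vartheta\in I_V$ enter, and it is also why the proof cannot collapse to a single chart as in Lemma~\ref{lemma:parametr by single const th surf}: for unbounded potentials the leaf $\{\theta=\vartheta\}$ of a crushing-singularity development is degenerate for isolated values of $\vartheta$, so one genuinely needs all $\vartheta\in I_V$ to cover ${}^{\rosc}\mfD_{\mrI,\roc}^{\iso}[V]$.
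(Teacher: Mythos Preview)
Your overall architecture is correct and matches the paper's: glue the fixed-mean-curvature manifolds ${}^{\rosc}\mfB_{\mrI,\rond}^{\iso}[V](\vartheta)$ along the Einstein-flow transition maps $\Pi_{\mrI,\iso}^{\vartheta_0,\vartheta_1}$ (these are exactly the maps in (\ref{eq:Pi mrI iso vto vtt})), and check coverage via Lemma~\ref{lemma:bth large enough}. Your (B) and (C) are fine.

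The one place your argument is under-justified is injectivity in (A). You write that an isometry of the two developments ``must carry the unique constant-mean-curvature leaf $\{\theta=\vartheta\}$ of one onto that of the other, the Gaussian/CMC foliation being canonically attached to the metric'', and then cite Proposition~\ref{prop:iso id to iso sol}. That proposition goes in the wrong direction: it produces $\chi\times\roId$ from an isometry $\chi$ of initial data, not the other way around. What is missing is an argument that an arbitrary isometry $\psi:(M_0,g_0,\phi_0)\to(M_1,g_1,\phi_1)$ of developments must preserve the constant-$t$ foliation. The paper handles this by using the scalar field: on the open dense set where $\phi_t\neq 0$ (isolated zeros, by Lemma~\ref{lemma:bth large enough}), $\d_t$ is the unique future-directed unit vector parallel to the spacetime gradient of $\phi$; since $\psi^*\phi_1=\phi_0$, it follows that $\psi_*\d_t=\d_t$, hence $\psi$ preserves the leaves and, by strict monotonicity of $\theta$, the unique $\{\theta=\vartheta\}$ leaf. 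Your assertion can also be made rigorous by the proper-time characterization used in the proof of Lemma~\ref{lemma:dev smo str nisoI nIX} (which applies since crushing-singularity developments have $J=(0,\infty)$), but as written you have not supplied either argument.

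A minor point: your paraphrase of Lemma~\ref{lemma:bth large enough} as giving ``$|\theta_t|\geq c\theta^2$'' is not what that lemma states; it directly gives that the range of $\theta$ contains $I_V$ and that $\theta_t\leq 0$ with isolated zeros, which is all you need.
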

\begin{remark}
  In contrast to Lemmas~\ref{lemma:dev smo str nisoI nIX} and \ref{lemma:dev smo str IX}, we here restrict our attention to initial data with
  $\bphi_1\neq 0$; see Definition~\ref{def:id fixed vartheta zero}. The reason for this is the following. The topology and smooth structure on
  ${}^{\rosc}\mfD_{\mrI,\roc}^{\iso}[V]$ are obtained by taking a quotient of the disjoint union of the images of the $\iota_\vartheta$. For this construction
  to lead to a smooth manifold, we need to verify that the map relating the different images (generated by the Einstein flow) is a local diffeomorphism.
  To this end, it is convenient to know that $\theta_t\neq 0$ at the constant-$t$ hypersurfaces to be identified. This is the main purpose served
  by the non degeneracy condition that $\bphi_1\neq 0$. 
\end{remark}
\begin{proof}
  The proof can be found in Section~\ref{section:Id fix mean curv}.
\end{proof}
Unfortunately, we do not obtain a result analogous to Lemma~\ref{lemma:parametr by single const th surf} in the case of isotropic Bianchi type I
solutions. Nevertheless, the following holds. 
\begin{lemma}\label{lemma:iso BI dev homeom one fix CMC}
  Let $V\in \mfP_{\ropar}$ and $\vartheta>[3v_{\max}(V)]^{1/2}$. Then ${}^{\rosc}\mfD_{\mrI,\roc}^{\iso}[V]$ is homeomorphic to
  ${}^{\rosc}\mfB_{\mrI}^{\iso}[V](\vartheta)$.
\end{lemma}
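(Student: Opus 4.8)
The plan is to show that the map $\Psi$ assigning to the isometry class of $\mfI\in {}^{\rosc}\mB_{\mrI}^{\iso}[V](\vartheta)$ the isometry class of $\md[V](\mfI)$ is a homeomorphism of ${}^{\rosc}\mfB_{\mrI}^{\iso}[V](\vartheta)$ onto ${}^{\rosc}\mfD_{\mrI,\roc}^{\iso}[V]$. First I would record the shape of the domain. For isotropic Bianchi type I data the metric is flat and $\bk=(\vartheta/3)\bge$, so by (\ref{eq:ham con original}) the only constraint is $\tfrac23\vartheta^2=\bphi_1^2+2V(\bphi_0)$, and an isometry class is determined by $(\bphi_0,\bphi_1)$ on this level set. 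Since $\vartheta^2/3>v_{\max}(V)$ and $v_{\max}(V)$ dominates all critical values of $V$, the function $(\bphi_0,\bphi_1)\mapsto\bphi_1^2+2V(\bphi_0)$ has no critical point on the level set (at one we would have $\bphi_1=0$ and $V'(\bphi_0)=0$, so $V(\bphi_0)\le v_{\max}(V)<\vartheta^2/3$), whence ${}^{\rosc}\mfB_{\mrI}^{\iso}[V](\vartheta)$ is a smooth $1$-manifold, cf.\ Remark~\ref{remark:par iso id fixed mc}. Its open subset $\{\bphi_1\ne 0\}$ is exactly ${}^{\rosc}\mfB_{\mrI,\rond}^{\iso}[V](\vartheta)$, and the complement $E=\{\bphi_1=0\}=\{V(\bphi_0)=\vartheta^2/3\}$ is discrete — indeed finite — (an accumulation point $s_0$ of $E$ would satisfy $V'(s_0)=0$ and $V(s_0)=\vartheta^2/3$, making $\vartheta^2/3$ a critical value, contrary to $\vartheta^2/3>v_{\max}(V)$), with $V'(\bphi_0)\ne 0$ at each point of $E$; in particular such data are non-trivial, so indeed $E\subset {}^{\rosc}\mB_{\mrI}^{\iso}[V](\vartheta)$.

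Next I would verify that $\Psi$ takes values in ${}^{\rosc}\mfD_{\mrI,\roc}^{\iso}[V]$ and is a continuous bijection. Every $\md[V](\mfI)$ with $\mfI\in{}^{\rosc}\mB_{\mrI}^{\iso}[V](\vartheta)$ has a crushing singularity: for $\mfI\notin E$ this is contained in Lemma~\ref{lemma:dev smo str isoI} (where $\iota_\vartheta$ lands in ${}^{\rosc}\mfD_{\mrI,\roc}^{\iso}[V]$), and for $\mfI\in E$ it follows from Lemma~\ref{lemma:BianchiAdevelopment}, using that the scalar field genuinely evolves into the past because $V'(\bphi_0)\ne 0$; isometric data give isometric developments by Proposition~\ref{prop:iso id to iso sol}, so $\Psi$ is well defined on isometry classes. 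On ${}^{\rosc}\mfB_{\mrI,\rond}^{\iso}[V](\vartheta)$ the map $\Psi$ equals $\iota_\vartheta$ of Lemma~\ref{lemma:dev smo str isoI}, hence is a homeomorphism onto an open subset of ${}^{\rosc}\mfD_{\mrI,\roc}^{\iso}[V]$. For surjectivity, any class in ${}^{\rosc}\mfD_{\mrI,\roc}^{\iso}[V]$ is represented by a development whose mean curvature $\theta$ attains the value $\vartheta$ by Lemma~\ref{lemma:bth large enough} (this uses $\vartheta>[3v_{\max}(V)]^{1/2}$); moreover $\theta$ is non-increasing in $t$ with $\theta_t$ vanishing only where $\phi_t$ does, and on $\{\theta\ge\vartheta\}$ a zero of $\phi_t$ forces $V(\phi)=\theta^2/3>v_{\max}(V)$, hence $V'(\phi)\ne 0$ and the zero is isolated, so $\theta$ is strictly decreasing there and the leaf $\{\theta=\vartheta\}$ is a single $G$-orbit; the data induced on it lies in ${}^{\rosc}\mB_{\mrI}^{\iso}[V](\vartheta)$ by Proposition~\ref{prop:unique max dev} and is mapped by $\Psi$ to the given class. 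Injectivity follows because an isometry between two such developments preserves the non-constant scalar field, hence the homogeneous constant-mean-curvature foliation, hence carries the (unique) leaf $\{\theta=\vartheta\}$ of one onto that of the other and restricts there to an isometry of the induced data; cf.\ Lemma~\ref{lemma:iso reg data iso dos}.

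It remains to check continuity of $\Psi$ at a point $\mfI_*\in E$ and then to conclude. The point is that on the development side the chart $\iota_\vartheta$ does not register $\Psi(\mfI_*)$, because the leaf $\{\theta=\vartheta\}$ of $\md[V](\mfI_*)$ carries degenerate data; however, since $V'(\bphi_0)\ne 0$ the scalar field in $\md[V](\mfI_*)$ is non-constant, so $\phi_t\ne 0$ just to the past of that leaf, and fixing $\vartheta'\in(\vartheta,\infty)$ close enough to $\vartheta$ the leaf $\{\theta=\vartheta'\}$ of $\md[V](\mfI_*)$ carries non-degenerate data, so $\Psi(\mfI_*)$ lies in the image of $\iota_{\vartheta'}$. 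On a neighbourhood of $\mfI_*$ in ${}^{\rosc}\mfB_{\mrI}^{\iso}[V](\vartheta)$ one has $\Psi=\iota_{\vartheta'}\circ\Phi_{\vartheta\to\vartheta'}$, where $\Phi_{\vartheta\to\vartheta'}$ sends data at mean curvature $\vartheta$ to the data induced on the leaf $\{\theta=\vartheta'\}$ of the corresponding development; this $\Phi_{\vartheta\to\vartheta'}$ is continuous near $\mfI_*$ — by continuous dependence of the evolution equations on initial data together with the implicit function theorem at the crossing time of $\theta=\vartheta'$, which is a transversal crossing since $\phi_t\ne 0$ there — and takes values in ${}^{\rosc}\mfB_{\mrI,\rond}^{\iso}[V](\vartheta')$ near $\mfI_*$. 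As $\iota_{\vartheta'}$ is a homeomorphism onto an open set, $\Psi$ is continuous at $\mfI_*$, hence everywhere. Finally, both ${}^{\rosc}\mfB_{\mrI}^{\iso}[V](\vartheta)$ and ${}^{\rosc}\mfD_{\mrI,\roc}^{\iso}[V]$ are $1$-manifolds (the former by the regular-value computation above, the latter by Lemma~\ref{lemma:dev smo str isoI}), so by invariance of domain the continuous injection $\Psi$ is an open map, and an open continuous bijection is a homeomorphism. I expect the main obstacle to be precisely this continuity at $E$: one must abandon the natural chart $\iota_\vartheta$, and since the mean curvature is stationary at the leaf $\{\theta=\vartheta\}$ of $\md[V](\mfI_*)$ (there $\phi_t=0$, so $\theta_t=0$), one has to check that flowing slightly into the past restores transversality of the constant-mean-curvature leaves and yields a continuous flow map into the non-degenerate locus; for the same reason the statement is phrased only as a homeomorphism rather than attempting to match the smooth structures at these points.
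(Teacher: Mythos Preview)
Your proposal is correct and follows essentially the same route as the paper: define the map from data at mean curvature $\vartheta$ to developments, use Lemma~\ref{lemma:bth large enough} for well-definedness and surjectivity, get injectivity from the fact that isometries preserve the CMC foliation (this is the content of the argument in the proof of Lemma~\ref{lemma:dev smo str isoI} that the paper cites), and handle continuity at the degenerate points of $E$ by flowing a little further into the past to a leaf $\{\theta=\vartheta'\}$ with $\vartheta'>\vartheta$ where the data become non-degenerate and the chart $\iota_{\vartheta'}$ applies. The paper does exactly this last manoeuvre, but uniformly at \emph{every} point rather than only at $E$, and thereby actually shows that $\Psi$ is smooth, not just continuous.

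Two small remarks. First, your appeal to Lemma~\ref{lemma:BianchiAdevelopment} for the crushing singularity at points of $E$ is not quite the right reference: that lemma only gives the dichotomy $J=\ro$ versus $J=(0,\infty)$, and ``the scalar field genuinely evolves'' does not by itself force blow-up of $\theta$; the correct citation is Lemma~\ref{lemma:bth large enough}, which you already use elsewhere and which covers all data with $\vartheta>[3v_{\max}(V)]^{1/2}$, degenerate or not. Second, your use of invariance of domain to get openness (hence continuity of the inverse) is a clean shortcut; the paper instead argues continuity of $\Psi^{-1}$ directly from strict monotonicity of $\theta$ and smooth dependence of the flow, but leaves the details to the reader. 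Your route is arguably tidier here.
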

\begin{proof}
  The proof can be found in Section~\ref{section:Id fix mean curv}.
\end{proof}
\begin{remark}\label{remark:incompatible topologies}
  Due to Lemma~\ref{lemma:top of devel iso BI} and Remark~\ref{remark:par iso id fixed mc} it follows that ${}^{\rosc}\mfB_{\mrI}^{\iso}[V](\vartheta)$ has
  one of three possible topologies:
  two disjoint copies of $\ro$; $\ro$; and $\sn{1}$. On the other hand, due to Section~\ref{ssection:par data on sing}, it is clear that
  ${}^{\rosc}\mfS_{\mrI}^{\iso}$ is diffeomorphic to two copies of $\ro$. In general, there can therefore not be a diffeomorphism from
  ${}^{\rosc}\mfS_{\mrI}^{\iso}$ to ${}^{\rosc}\mfD_{\mrI,\roc}^{\iso}[V]$. In particular, this mere topological observation provides information
  concerning the dynamics. 
\end{remark}

\section[Map from developments to data on the singularity]{The map from isometry classes of developments to isometry classes
  of initial data on the singularity}\label{section:map iso dev to sdata}

Before turning to the regularity of the map from developments to data on the singularity, it is convenient to introduce the following terminology. 
\begin{definition}\label{def:mfR nBIX nIiso}
  Let $0\leq V\in\mfP_{\a_V}^2$ for some $\a_V\in (0,1)$ and fix a $(\mfT,\mfs)$ such that $(\mfT,\mfs)\neq(\mrI,\iso)$ and $\mfT\neq\mrIX$.
  Let $\vartheta\in I_V:=([3\inf_sV(s)]^{1/2},\infty)$ and ${}^{\rosc}_{\roqu}\mfB_\mfT^\mfs[V](\vartheta)$
  \index{$\a$Aa@Notation!Sets of isometry classes of regular initial data!scqumfBsTVth@${}^{\rosc}_{\roqu}\mfB_\mfT^\mfs[V](\vartheta)$}%
  denote the isometry classes of the elements in ${}^{\rosc}\mB_\mfT^\mfs[V](\vartheta)$ whose developments induce data on the singularity.
  Due to Lemma~\ref{lemma:iso ri iso ids incl type}, there is then a well defined map
  \begin{equation}\label{eq:mfR mfT mfs vartheta}
    \mfR_{\mfT,\mfs}^{\vartheta}:{}^{\rosc}_{\roqu}\mfB_\mfT^\mfs[V](\vartheta)\rightarrow {}^{\rosc}\mfS_\mfT^\mfs.
  \end{equation}
  \index{$\a$Aa@Notation!Maps!mfRTsth@$\mfR_{\mfT,\mfs}^{\vartheta}$}%
\end{definition}
With this terminology, the following holds. 
\begin{prop}\label{prop:dev to data on sing nIX}
  Assume that the conditions of Definition~\ref{def:mfR nBIX nIiso} are satisfied and that $V\in \mfP_{\a_V}^\infty$ for some $\a_V\in (0,1)$. Then
  ${}^{\rosc}_{\roqu}\mfB_\mfT^\mfs[V](\vartheta)$ is an open subset of
  ${}^{\rosc}\mfB_\mfT^\mfs[V](\vartheta)$ and the map (\ref{eq:mfR mfT mfs vartheta}) is a diffeomorphism onto its image. Moreover,
  using the notation introduced in Lemma~\ref{lemma:dev smo str nisoI nIX}, if $\vartheta_i\in I_V$, $i\in\{1,2\}$, and
  \begin{equation}\label{eq:iota vthonexo eq iota vthtwoextwo}
    \iota_{\vartheta_1}(x_1)=\iota_{\vartheta_2}(x_2)
  \end{equation}
  for some $x_i\in {}^{\rosc}_{\roqu}\mfB_\mfT^\mfs[V](\vartheta_i)$, $i\in\{1,2\}$, then $\mfR_{\mfT,\mfs}^{\vartheta_1}(x_1)=\mfR_{\mfT,\mfs}^{\vartheta_2}(x_2)$.
  Next, define
  \[
  {}^{\rosc}_{\roqu}\mfD_\mfT^\mfs[V]:=\bigcup_{\vartheta\in I_V}\iota_{\vartheta}\big({}^{\rosc}_{\roqu}\mfB_\mfT^\mfs[V](\vartheta)\big)    
  \]
  \index{$\a$Aa@Notation!Sets of isometry classes of developments!scqumfDTsV@${}^{\rosc}_{\roqu}\mfD_\mfT^\mfs[V]$}%
  and $\mfR_{\mfT,\mfs}:{}^{\rosc}_{\roqu}\mfD_\mfT^\mfs[V]\rightarrow {}^{\rosc}\mfS_\mfT^\mfs$
  \index{$\a$Aa@Notation!Maps!mfRTs@$\mfR_{\mfT,\mfs}$}%
  by
  \[
  \mfR_{\mfT,\mfs}(\iota_\vartheta(x))=\mfR_{\mfT,\mfs}^{\vartheta}(x).
  \]
  Then ${}^{\rosc}_{\roqu}\mfD_\mfT^\mfs[V]$ is an open subset of ${}^{\rosc}\mfD_\mfT^\mfs[V]$ and $\mfR_{\mfT,\mfs}$ is a diffeomorphism. 
\end{prop}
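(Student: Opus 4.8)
The plan is to construct, for each $\vartheta\in I_V$, an explicit smooth inverse of $\mfR_{\mfT,\mfs}^{\vartheta}$ out of Theorem~\ref{thm:dataonsingtosolution}, and then to assemble the global statement by gluing over $\vartheta$. Throughout I use that for the $(\mfT,\mfs)$ under consideration the mean curvature $\theta$ of any development in ${}^{\rosc}\mfD_\mfT^\mfs[V]$ is strictly decreasing (Lemma~\ref{lemma:BianchiAdevelopment}), so that a development with a crushing singularity, where $\theta$ increases to $\infty$ at the left endpoint of its existence interval, meets each level $\{\theta=\vartheta\}$ with $\vartheta$ above $\inf\theta$ in exactly one hypersurface. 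Let $\mfN_\vartheta\subseteq{}^{\rosc}\mfS_\mfT^\mfs$ consist of those isometry classes of data on the singularity whose development, produced by Theorem~\ref{thm:dataonsingtosolution}, has mean curvature dropping strictly below $\vartheta$; this set is open by continuous dependence of that development on its data on the singularity, cf. Section~\ref{ssection:dataonsingularity}. On $\mfN_\vartheta$ define $\mfG_\vartheta$ by taking the development of Theorem~\ref{thm:dataonsingtosolution}, locating its unique hypersurface with $\theta=\vartheta$ (whose time-coordinate depends smoothly on the data by the implicit function theorem, since $\theta_t\neq0$ there), and recording the Bianchi class A initial data induced on it; by Proposition~\ref{prop:unique max dev} these lie in ${}^{\rosc}\mfB_\mfT^\mfs[V](\vartheta)$, and since the development induces the prescribed data on the singularity they lie in ${}^{\rosc}_{\roqu}\mfB_\mfT^\mfs[V](\vartheta)$. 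Uniqueness in Theorem~\ref{thm:dataonsingtosolution} gives $\mfG_\vartheta\circ\mfR_{\mfT,\mfs}^\vartheta=\roId$; conversely, the development of Theorem~\ref{thm:dataonsingtosolution} is, up to isometry and time translation, the maximal development $\md[V]$ of the data it induces at any of its hypersurfaces, which gives $\mfR_{\mfT,\mfs}^\vartheta\circ\mfG_\vartheta=\roId$. Hence the image of $\mfR_{\mfT,\mfs}^\vartheta$ is precisely the open set $\mfN_\vartheta$, and it is a diffeomorphism onto $\mfN_\vartheta$ once both maps are known to be smooth.

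Smoothness of $\mfR_{\mfT,\mfs}^\vartheta$ reduces to showing that the limits defining the data on the singularity in Definition~\ref{def:ind data on sing} --- those producing $\msK$, $\Phi_1$, $\Phi_0$, and, through the operator $\theta^{\mK}$, the metric $\msH$ --- depend smoothly on the initial data at mean curvature $\vartheta$. I would obtain this from locally uniform exponential convergence rates for these limits, the analytic input underlying the asymptotic analysis behind Theorems~\ref{thm:dataonsingtosolution} and \ref{thm:dev inducing data on the sing}; together with the standard smooth dependence of solutions of ODEs on parameters, this upgrades convergence to smoothness of the limit. Smoothness of $\mfG_\vartheta$ likewise rests on the companion assertion that the development furnished by Theorem~\ref{thm:dataonsingtosolution} depends smoothly on its data on the singularity, which I would read off from the Fuchsian construction in Section~\ref{ssection:dataonsingularity}. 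For openness of ${}^{\rosc}_{\roqu}\mfB_\mfT^\mfs[V](\vartheta)$ in ${}^{\rosc}\mfB_\mfT^\mfs[V](\vartheta)$ I would use the characterisation, again from the asymptotic analysis, of ``inducing data on the singularity'' as the orbit in the expansion-normalised state space eventually being captured by a region to which the quiescent stability argument applies; lying in such a region at some finite time is an open condition on the data. Via the diffeomorphism $\iota_\vartheta$ of Lemma~\ref{lemma:dev smo str nisoI nIX} onto an open subset of ${}^{\rosc}\mfD_\mfT^\mfs[V]$, each $\iota_\vartheta({}^{\rosc}_{\roqu}\mfB_\mfT^\mfs[V](\vartheta))$ is then open, and hence so is their union ${}^{\rosc}_{\roqu}\mfD_\mfT^\mfs[V]$ in ${}^{\rosc}\mfD_\mfT^\mfs[V]$.

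For the global map, the consistency relation is immediate: $\iota_{\vartheta_1}(x_1)=\iota_{\vartheta_2}(x_2)$ says the developments of representatives of $x_1$ and $x_2$ are isometric, and the data on the singularity induced by a development are intrinsic to it --- the limits in Definition~\ref{def:ind data on sing} do not refer to a chosen hypersurface --- so Lemma~\ref{lemma:iso reg data iso dos} yields $\mfR_{\mfT,\mfs}^{\vartheta_1}(x_1)=\mfR_{\mfT,\mfs}^{\vartheta_2}(x_2)$, and $\mfR_{\mfT,\mfs}$ is well defined. It is surjective onto ${}^{\rosc}\mfS_\mfT^\mfs$, since given $\mfI_\infty$ there, Theorem~\ref{thm:dataonsingtosolution} supplies a development inducing it whose mean curvature decreases from $\infty$ and hence attains some $\vartheta\in I_V$; the data it induces at the corresponding hypersurface lie in ${}^{\rosc}_{\roqu}\mfB_\mfT^\mfs[V](\vartheta)$ and map to $\mfI_\infty$. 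It is injective, since two classes with the same image have developments inducing the same data on the singularity, hence equal developments by the uniqueness in Theorem~\ref{thm:dataonsingtosolution}, hence equal $\iota_\vartheta$-images. Finally $\mfR_{\mfT,\mfs}$ is smooth because on the open set $\iota_\vartheta({}^{\rosc}_{\roqu}\mfB_\mfT^\mfs[V](\vartheta))$ it equals $\mfR_{\mfT,\mfs}^\vartheta\circ\iota_\vartheta^{-1}$, a composite of diffeomorphisms, and $\mfR_{\mfT,\mfs}^{-1}$ is smooth because on the open set $\mfN_\vartheta$ it equals $\iota_\vartheta\circ\mfG_\vartheta$; since these open sets cover ${}^{\rosc}_{\roqu}\mfD_\mfT^\mfs[V]$ and ${}^{\rosc}\mfS_\mfT^\mfs$ respectively, $\mfR_{\mfT,\mfs}$ is a diffeomorphism.

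The analytic heart of the argument, which I expect to be the main obstacle, is the pair of smooth-dependence statements: smoothness of the map from initial data at mean curvature $\vartheta$ to the asymptotic data on the singularity, which requires locally uniform control of the convergence rates of the various limits --- in particular those involving $\theta^{\mK}$, hence the logarithmic factors $(\ln\theta)^l/l!$ --- and smoothness of the inverse Fuchsian construction of Theorem~\ref{thm:dataonsingtosolution} in its data on the singularity; the openness of the quiescent set is of comparable difficulty and relies on the same asymptotic analysis. The remainder is bookkeeping with Theorem~\ref{thm:dataonsingtosolution}, Lemma~\ref{lemma:iso reg data iso dos} and the smooth structures of Lemma~\ref{lemma:dev smo str nisoI nIX}.
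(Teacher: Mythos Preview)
Your overall architecture is correct and matches the paper's: openness, then local diffeomorphism for each $\mfR_{\mfT,\mfs}^\vartheta$, then consistency across $\vartheta$, then assembly of the global map using the charts $\iota_\vartheta$ of Lemma~\ref{lemma:dev smo str nisoI nIX}, with surjectivity supplied by Theorem~\ref{thm:dataonsingtosolution} and injectivity by uniqueness of developments with prescribed data on the singularity.

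The genuine difference is in how the local diffeomorphism property of $\mfR_{\mfT,\mfs}^\vartheta$ is obtained. You construct an explicit inverse $\mfG_\vartheta$ out of Theorem~\ref{thm:dataonsingtosolution} and then need two smoothness results: smoothness of the forward map (data at $\theta=\vartheta$ to data on the singularity) and smoothness of the backward Fuchsian construction in its asymptotic data. The paper avoids the second of these entirely. Instead, in Lemma~\ref{lemma:reg of B} it passes to adapted variables $(\nu_k,\Sigma_k,\varkappa,\psi_0,\phi_1)$ for which the right-hand sides of the evolution decay exponentially uniformly on a neighbourhood of an admissible convergent solution; this gives openness of the admissible set and smoothness of the forward map by differentiating under the flow. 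The local invertibility is then obtained not by exhibiting an inverse but by computing the differential of the forward map and showing it is surjective: one writes down explicit tangent vectors to the constraint surface (perturbations along $\nu_i$, $\varkappa$, $\psi_0$, and the rotational vector fields $W$, $P_\pm$ in the $(\Sigma_+,\Sigma_-,\phi_1)$-ellipsoid), evolves them, and checks that their limits span the tangent space of $\sfQ_\mfT^\mfs$ up to errors that can be made small by choosing $\tau_0$ close to $-\infty$. This is packaged as Lemma~\ref{lemma:local repr map to sing}, which is what the paper's proof of Proposition~\ref{prop:dev to data on sing nIX} invokes.

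What your route buys is conceptual clarity and a concrete description of the inverse; what the paper's route buys is that one only needs the analytic control in one direction, and the smooth Fuchsian dependence you flag as a potential obstacle is never needed. Your proposal is not wrong, but if you carry it out you will in effect have to redo the content of Lemma~\ref{lemma:reg of B} for the forward map and additionally prove parameter-smoothness of the fixed point in Proposition~\ref{prop:variableexandunique}; the paper's approach is more economical on that second front. Your argument for the consistency relation (the induced data on the singularity is intrinsic to the development, via Lemma~\ref{lemma:iso reg data iso dos}) is slightly cleaner than the paper's, which routes through an auxiliary $x_3$ at mean curvature $\vartheta_2$.
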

\begin{remark}\label{remark:map to sing diffeo}
  The statement that $\mfR_{\mfT,\mfs}$ is a diffeomorphism means, in particular, that the range of $\mfR_{\mfT,\mfs}$ is ${}^{\rosc}\mfS_\mfT^\mfs$.
  Note also that ${}^{\rosc}_{\roqu}\mfD_\mfT^\mfs[V]$ is the set of all isometry classes of simply connected developments of Bianchi type $\mfT$
  and symmetry type $\mfs$ (corresponding to initial data with strictly positive mean curvature), corresponding to a potential $V$, that give
  rise to data on the singularity. 
\end{remark}
\begin{remark}\label{remark:one CMC surf enough}
  If, in addition to the assumptions of the proposition, $V\in\mfP_{\ropar}$ and $\vartheta>[3v_{\max}(V)]^{1/2}$, then $\mfR_{\mfT,\mfs}^{\vartheta}$
  is a diffeomorphim. This follows by combining the proposition with Lemma~\ref{lemma:parametr by single const th surf}.
\end{remark}
\begin{proof}
  The proof is to be found in Section~\ref{section:Reg map from dev to dos}. 
\end{proof}

In the case of Bianchi type IX, we have to impose additional conditions; cf. Lemma~\ref{lemma:dev smo str IX}.
\begin{definition}\label{def:mfR BIX}
  Let $0\leq V\in\mfP_{\a_V}^2$ for some $\a_V\in (0,1)$; $\mft\in \{+,\roap,\ropp\}$; $\mfs\in \{\iso,\roLRS,\rogen\}$; and
  $0<\vartheta\in\ro$. Let ${}^{\rosc}_{\roqu}\mfB_{\mrIX,\mft}^{\mfs,\rond}[V](\vartheta)$
  \index{$\a$Aa@Notation!Sets of isometry classes of regular initial data!scqumfBIXtsndVth@${}^{\rosc}_{\roqu}\mfB_{\mrIX,\mft}^{\mfs,\rond}[V](\vartheta)$}%
  denote the isometry classes of
  the elements in ${}^{\rosc}\mB_{\mrIX,\mft}^{\mfs,\rond}[V](\vartheta)$ whose developments induce data on the singularity. Due to
  Lemma~\ref{lemma:iso ri iso ids incl type}, there is then a well defined map
  \begin{equation}\label{eq:mfR mfT mfs vartheta IX}
    \mfR_{\mrIX,\mfs,\mft}^{\vartheta}:{}^{\rosc}_{\roqu}\mfB_{\mrIX,\mft}^{\mfs,\rond}[V](\vartheta)\rightarrow {}^{\rosc}\mfS_{\mrIX}^\mfs.
  \end{equation}
  \index{$\a$Aa@Notation!Maps!mfRIXstth@$\mfR_{\mrIX,\mfs,\mft}^{\vartheta}$}%
\end{definition}
\begin{prop}\label{prop:dev to dos BIX}
  Assume that the conditions of Definition~\ref{def:mfR BIX} are satisfied and that, for some $\a_V\in (0,1/3)$, $V\in \mfP_{\a_V}^\infty$.
  Then ${}^{\rosc}_{\roqu}\mfB_{\mrIX,\mft}^{\mfs,\rond}[V](\vartheta)$ is an open subset of ${}^{\rosc}\mfB_{\mrIX}^{\mfs}[V](\vartheta)$
  and the map (\ref{eq:mfR mfT mfs vartheta IX}) is a diffeomorphism onto its image. Moreover, using the notation introduced in
  Lemma~\ref{lemma:dev smo str IX}, if $\vartheta_i>0$, $i\in\{1,2\}$, and
  \begin{equation}\label{eq:iota vthonexo eq iota vthtwoextwo IX}
    \iota_{\vartheta_1}(x_1)=\iota_{\vartheta_2}(x_2)
  \end{equation}
  for some $x_i\in {}^{\rosc}_{\roqu}\mfB_{\mrIX,\mft}^{\mfs,\rond}[V](\vartheta_i)$, $i\in\{1,2\}$, then
  $\mfR_{\mrIX,\mfs,\mft}^{\vartheta_1}(x_1)=\mfR_{\mrIX,\mfs,\mft}^{\vartheta_2}(x_2)$. Next, define
  \[
  {}^{\rosc}_{\roqu}\mfD_{\mrIX,\mft}^\mfs[V]:=\bigcup_{\vartheta>0}\iota_{\vartheta}\big({}^{\rosc}_{\roqu}\mfB_{\mrIX,\mft}^{\mfs,\rond}[V](\vartheta)\big)    
  \]
  \index{$\a$Aa@Notation!Sets of isometry classes of developments!scqumfDIXtsV@${}^{\rosc}_{\roqu}\mfD_{\mrIX,\mft}^\mfs[V]$}%
  and $\mfR_{\mrIX,\mfs,\mft}:{}^{\rosc}_{\roqu}\mfD_{\mrIX,\mft}^\mfs[V]\rightarrow {}^{\rosc}\mfS_{\mrIX}^\mfs$ by
  \[
  \mfR_{\mrIX,\mfs,\mft}\big(\iota_\vartheta(x)\big)=\mfR_{\mrIX,\mfs,\mft}^{\vartheta}(x).
  \]
  \index{$\a$Aa@Notation!Maps!mfRIXst@$\mfR_{\mrIX,\mfs,\mft}$}%
  Then ${}^{\rosc}_{\roqu}\mfD_{\mrIX,\mft}^\mfs[V]$ is an open subset of ${}^{\rosc}\mfD_{\mrIX,\mft}^\mfs[V]$ and $\mfR_{\mrIX,\mfs,\mft}$ is a diffeomorphism
  onto its image (which is a diffeomorphism in case $\mft\in\{\roap,\ropp\}$).  
\end{prop}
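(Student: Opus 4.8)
The plan is to follow the proof of Proposition~\ref{prop:dev to data on sing nIX} step for step, substituting the Bianchi type IX asymptotic analysis in the direction of the singularity for the corresponding non-type-IX input, and then to deal with the points that are genuinely new in the type IX setting: openness of the quiescent set inside ${}^{\rosc}\mfB_{\mrIX}^{\mfs}[V](\vartheta)$; compatibility with the smooth structure of Lemma~\ref{lemma:dev smo str IX}; and, for $\mft\in\{\roap,\ropp\}$, surjectivity of $\mfR_{\mrIX,\mfs,\mft}$ onto ${}^{\rosc}\mfS_{\mrIX}^\mfs$. Since $0\leq V\in\mfP_{\a_V}^\infty$ with $\a_V\in(0,1/3)$, we have $\roap=\ropp$ (the remark following Definition~\ref{definition:Bap Bp Bpp}), so the cases $\mft=\roap$ and $\mft=\ropp$ need not be distinguished.

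First I would establish the local statement. By Theorem~\ref{thm:dataonsingtosolution} every $\mfI_\infty\in\mS_{\mrIX}^\mfs$ is induced on the singularity by a unique Bianchi type IX development; from the proof of that theorem one extracts smooth dependence of this development — hence of the Cauchy data induced on any hypersurface at which $\theta_t\neq 0$ — on $\mfI_\infty$. Conversely, the type IX asymptotic analysis shows that for $\mfI\in{}^{\rosc}\mB_{\mrIX,\mft}^{\mfs,\rond}[V](\vartheta)$ whose development induces data on the singularity, the limits of Definition~\ref{def:ind data on sing} exist and depend smoothly on $\mfI$. The non-degeneracy condition $\rond$ forces $\theta_t<0$ on the relevant interval, so evaluation of a development at mean curvature $\vartheta$ is a smooth operation; the two maps just described are then mutually inverse bijections between ${}^{\rosc}_{\roqu}\mfB_{\mrIX,\mft}^{\mfs,\rond}[V](\vartheta)$ and its image, so $\mfR_{\mrIX,\mfs,\mft}^{\vartheta}$ is a diffeomorphism onto that image, and the image is open in ${}^{\rosc}\mfS_{\mrIX}^\mfs$ since a development built from data on the singularity near one attaining $\vartheta$ still attains $\vartheta$ (again by $\theta_t\neq 0$ and continuous dependence).

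Next I would show that ${}^{\rosc}_{\roqu}\mfB_{\mrIX,\mft}^{\mfs,\rond}[V](\vartheta)$ is open in ${}^{\rosc}\mfB_{\mrIX}^{\mfs}[V](\vartheta)$. By the dichotomy discussed in Section~\ref{ssection:asdirofsing} and Theorem~\ref{thm:dev inducing data on the sing}, $\theta^{-1}\phi_t$ converges along any Bianchi type IX development, and within a fixed symmetry type $\mfs\in\{\iso,\roLRS,\rogen\}$ the development fails to induce data on the singularity only if $\mfs=\rogen$ and $\lim\theta^{-1}\phi_t=0$; this limit depends continuously on $\mfI$. The remaining defining conditions — $\mft$-positivity and non-degeneracy — survive small perturbations because near a quiescent big bang one has $\theta>0$, $\theta_t<0$ and $V\circ\phi/\theta^2\to 0$ (use Lemma~\ref{lemma:Bianchi IX remainder} together with the asymptotics). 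Hence the quiescent set is the intersection of ${}^{\rosc}\mfB_{\mrIX}^{\mfs}[V](\vartheta)$ with an open set. The compatibility is immediate: $\iota_{\vartheta_1}(x_1)=\iota_{\vartheta_2}(x_2)$ means the two developments are isometric, and the data induced on the singularity is extracted from a development by canonical limits and hence transported by any isometry (cf. Lemma~\ref{lemma:iso reg data iso dos}), so $\mfR_{\mrIX,\mfs,\mft}^{\vartheta_1}(x_1)=\mfR_{\mrIX,\mfs,\mft}^{\vartheta_2}(x_2)$. Therefore $\mfR_{\mrIX,\mfs,\mft}$ is well defined; ${}^{\rosc}_{\roqu}\mfD_{\mrIX,\mft}^\mfs[V]$ is the union of the sets $\iota_\vartheta({}^{\rosc}_{\roqu}\mfB_{\mrIX,\mft}^{\mfs,\rond}[V](\vartheta))$, each open in the open subset $\iota_\vartheta({}^{\rosc}\mfB_{\mrIX,\mft}^{\mfs,\rond}[V](\vartheta))$ of ${}^{\rosc}\mfD_{\mrIX,\mft}^\mfs[V]$ (Lemma~\ref{lemma:dev smo str IX}), hence open; locally $\mfR_{\mrIX,\mfs,\mft}=\mfR_{\mrIX,\mfs,\mft}^{\vartheta}\circ\iota_\vartheta^{-1}$ is a diffeomorphism onto an open set, and global injectivity follows from the uniqueness clause of Theorem~\ref{thm:dataonsingtosolution}; so $\mfR_{\mrIX,\mfs,\mft}$ is a diffeomorphism onto its open image.

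Finally, for $\mft\in\{\roap,\ropp\}$ I would prove the image is all of ${}^{\rosc}\mfS_{\mrIX}^\mfs$. Given $\mfI_\infty\in{}^{\rosc}\mS_{\mrIX}^\mfs$, Theorem~\ref{thm:dataonsingtosolution} yields $\mfI\in{}^{\rosc}\mB_{\mrIX}^{\mfs}[V]$ with $\md[V](\mfI)$ inducing $\mfI_\infty$, and it remains to see that for $\vartheta$ large the data induced at $\theta=\vartheta$ lie in ${}^{\rosc}\mB_{\mrIX,\roap}^{\mfs,\rond}[V](\vartheta)$; their isometry class then lies in ${}^{\rosc}_{\roqu}\mfB_{\mrIX,\roap}^{\mfs,\rond}[V](\vartheta)$ and is sent to $\mfI_\infty$. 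Non-degeneracy and $\theta>0$ near $t_-$ hold since the singularity is crushing and the asymptotics give $\theta_t<0$ there (Lemma~\ref{lemma:Bianchi IX remainder}); $|\phi|\lesssim\ln\theta$ near $t_-$ (from $\phi+\theta^{-1}\phi_t\ln\theta\to\Phi_0$) together with $V\in\mfP_{\a_V}^0$ gives $V\circ\phi/\theta^2\to 0$; and by the Hamiltonian constraint (\ref{eq:ham con ito X}), $X/\theta^2=1-\tfrac32(\theta^{-1}\phi_t)^2-3V\circ\phi/\theta^2\to 1-\tfrac32\Phi_1^2$. Since $\mathfrak{su}(2)$ has no $2$-dimensional subalgebra, condition~(3) of Definition~\ref{def:ndvacidonbbssh} forces all eigenvalues of $\msK$ to be positive (unless $1$ is an eigenvalue, in which case $\Phi_1=0$), so $\tr\msK^2\geq 1/3$ and $\Phi_1^2=1-\tr\msK^2\leq 2/3$; thus $X/\theta^2\to 1-\tfrac32\Phi_1^2\geq 0$, so that $X/\theta^2>-\e$ eventually for every $\e>0$, i.e.\ $\md[V](\mfI)$ is almost positive. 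This yields surjectivity; for $\mft=+$ one obtains only surjectivity onto the image, as, e.g., isotropic type IX developments have $X<0$ throughout and are never positive. The step I expect to be the main obstacle is the openness claim — showing that the non-degeneracy and $\mft$-positivity conditions, which constrain the entire past evolution and not just the asymptotics, are stable under perturbation within ${}^{\rosc}\mfB_{\mrIX}^{\mfs}[V](\vartheta)$ — together with extracting the required smooth-dependence statements from the rather involved existence proof underlying Theorem~\ref{thm:dataonsingtosolution}.
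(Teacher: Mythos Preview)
Your proposal is correct and follows essentially the same strategy as the paper: both adapt the proof of Proposition~\ref{prop:dev to data on sing nIX} and supply the type~IX specific ingredients.

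The main difference is in how the technical core is organised. The steps you correctly flag as the main obstacle --- openness and smooth dependence --- are not extracted directly from the proof of Theorem~\ref{thm:dataonsingtosolution} in the paper. Instead, the paper packages them into the machinery of Chapter~\ref{chapter:map to sing}: the notion of admissible convergent solution (Definition~\ref{def:admconvsol}), Lemma~\ref{lemma:reg of B} (openness of that set in $\sfP_\mfT^\mfs$ and smoothness of the asymptotic-data map $\Psi_\mfT^\mfs$ with surjective differential), Lemma~\ref{lemma:from adm to dos} (admissible convergent $\Rightarrow$ induces data on the singularity), Lemma~\ref{lemma:local repr map to sing} (local representation of the isometry-class map as a diffeomorphism), and Lemma~\ref{lemma:PimfT well def} (openness of the $\mft$ and non-degeneracy conditions, via a bootstrap on the region where $\theta_t<0$ and $X>0$). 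Your plan to argue openness via continuous dependence of $\lim\theta^{-1}\phi_t$ amounts to the continuity part of Lemma~\ref{lemma:reg of B}, but note that continuity of this limit is not automatic --- it is precisely what the bootstrap argument in that lemma establishes. For injectivity the paper uses Proposition~\ref{prop:iso idos to iso dev} to descend the uniqueness of Theorem~\ref{thm:dataonsingtosolution} to isometry classes; your reference to the uniqueness clause needs this step to pass from developments to isometry classes at a fixed~$\vartheta$.

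Your surjectivity argument is slightly more explicit than the paper's. The paper simply invokes, from the proof of Theorem~\ref{thm:dataonsingtosolution}, that $V\circ\phi/\theta^2\to 0$, $\bS/\theta^2\to 0$ and $q\to 2$, which already gives almost-positivity and non-degeneracy. Your route via the absence of $2$-dimensional subalgebras in $\mathfrak{su}(2)$, forcing all $p_i>0$ and hence $\Phi_1^2\leq 2/3$ and $X/\theta^2\to 1-\tfrac32\Phi_1^2\geq 0$, is a clean alternative that yields the same conclusion.
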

\begin{remark}
  An observation similar to Remark~\ref{remark:map to sing diffeo} holds in this case as well.
\end{remark}
\begin{remark}
  In case $\mft=+$, the condition on $\a_V$ can be relaxed to $\a_V\in (0,1)$. 
\end{remark}
\begin{proof}
  The proof is to be found in Section~\ref{section:Reg map from dev to dos}. 
\end{proof}
Finally, in the isotropic Bianchi type I setting, we introduce the following terminology. 
\begin{definition}\label{def:mfR BI}
  Let $V\in \mfP_{\ropar}\cap\mfP_{\a_V}^2$ for some $\a_V\in (0,1)$ and let $\vartheta>[3v_{\max}(V)]^{1/2}$. Let ${}^{\rosc}_{\roqu}\mfB_{\mrI,\rond}^{\iso}[V](\vartheta)$
  \index{$\a$Aa@Notation!Sets of isometry classes of regular initial data!scqumfBIndisoVth@${}^{\rosc}_{\roqu}\mfB_{\mrI,\rond}^{\iso}[V](\vartheta)$}%
  denote the isometry classes of
  the elements in ${}^{\rosc}\mB_{\mrI,\rond}^{\iso}[V](\vartheta)$ whose developments induce data on the singularity. Due to
  Lemma~\ref{lemma:iso ri iso ids incl type}, there is then a well defined map
  \begin{equation}\label{eq:mfR vartheta I iso}
    \mfR_{\mrI,\iso}^{\vartheta}:{}^{\rosc}_{\roqu}\mfB_{\mrI,\rond}^{\iso}[V](\vartheta)\rightarrow {}^{\rosc}\mfS_{\mrI}^{\iso}.
  \end{equation}
  \index{$\a$Aa@Notation!Maps!mfRIisoth@$\mfR_{\mrI,\iso}^{\vartheta}$}%
\end{definition}
\begin{prop}\label{prop:dev to dos iso BI}
  Assume that the conditions of Definition~\ref{def:mfR BI} are satisfied and that $V\in \mfP_{\a_V}^\infty$ for some $\a_V\in (0,1)$.
  Then ${}^{\rosc}_{\roqu}\mfB_{\mrI,\rond}^{\iso}[V](\vartheta)$ is an open subset of ${}^{\rosc}\mfB_{\mrI,\rond}^{\iso}[V](\vartheta)$
  and the map (\ref{eq:mfR vartheta I iso}) is a diffeomorphism onto its image. Moreover, using the notation introduced in
  Lemma~\ref{lemma:dev smo str isoI}, if $\vartheta_i>[3v_{\max}(V)]^{1/2}$, $i\in\{1,2\}$, $\vartheta_1<\vartheta_2$ and
  \begin{equation}\label{eq:iota vthonexo eq iota vthtwoextwo I iso}
    \iota_{\vartheta_1}(x_1)=\iota_{\vartheta_2}(x_2)
  \end{equation}
  for some $x_i\in {}^{\rosc}_{\roqu}\mfB_{\mrI,\rond}^{\iso}[V](\vartheta_i)$, $i\in\{1,2\}$, then
  $\mfR_{\mrI,\iso}^{\vartheta_1}(x_1)=\mfR_{\mrI,\iso}^{\vartheta_2}(x_2)$. Next, define
  \[
    {}^{\rosc}_{\roqu}\mfD_{\mrI,\roc}^{\iso}[V]
    :=\bigcup_{\vartheta>[3v_{\max}(V)]^{1/2}}\iota_{\vartheta}\big({}^{\rosc}_{\roqu}\mfB_{\mrI,\rond}^{\iso}[V](\vartheta)\big)    
  \]
  \index{$\a$Aa@Notation!Sets of isometry classes of developments!scqumfDIcisoV@${}^{\rosc}_{\roqu}\mfD_{\mrI,\roc}^{\iso}[V]$}%
  and $\mfR_{\mrI,\iso}:{}^{\rosc}_{\roqu}\mfD_{\mrI,\roc}^{\iso}[V]\rightarrow {}^{\rosc}\mfS_{\mrI}^{\iso}$ by
  \[
  \mfR_{\mrI,\iso}\big(\iota_\vartheta(x)\big)=\mfR_{\mrI,\iso}^{\vartheta}(x).
  \]
  \index{$\a$Aa@Notation!Maps!mfRIiso@$\mfR_{\mrI,\iso}$}%
  Then ${}^{\rosc}_{\roqu}\mfD_{\mrI,\roc}^{\iso}[V]$ is an open subset of ${}^{\rosc}\mfD_{\mrI,\roc}^{\iso}[V]$ and $\mfR_{\mrI,\iso}$ is a diffeomorphism.  
\end{prop}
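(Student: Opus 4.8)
The plan is to follow the template of Propositions~\ref{prop:dev to data on sing nIX} and \ref{prop:dev to dos BIX}: fix a mean curvature $\vartheta>[3v_{\max}(V)]^{1/2}$, show that $\mfR_{\mrI,\iso}^{\vartheta}$ is a diffeomorphism onto an open subset of ${}^{\rosc}\mfS_{\mrI}^{\iso}$, verify that the family $\{\iota_{\vartheta}\}$ of Lemma~\ref{lemma:dev smo str isoI} is compatible with the family $\{\mfR_{\mrI,\iso}^{\vartheta}\}$, and then glue. First I would record the concrete form of the developments in play: by Remark~\ref{remark:form of dev of dos}, specialised to the isotropic Bianchi type I case (where $\bge$ is flat and the trace-free part of $\bk$ vanishes), $\md[V](\mfI)$ has the form $g=-dt\otimes dt+a^{2}(t)\msH$ with $\msH$ flat, so the Einstein non-linear scalar field equations reduce to the Friedmann constraint $\theta^{2}=\tfrac{3}{2}\phi_{t}^{2}+3V(\phi)$ (equivalently $\mfX[\theta,\phi,\phi_{t}]=0$; cf.\ (\ref{eq:ham con ito X})) together with the scalar wave equation, so that $\dot\theta=-\tfrac{3}{2}\phi_{t}^{2}$. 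Everything else is then a matter of assembling consequences of the asymptotic analysis of this ODE system, carried out in Sections~\ref{ssection:asdirofsing} and \ref{section:Reg map from dev to dos}.

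At fixed $\vartheta$ I would first treat the forward direction. As a preliminary, note that for isotropic Bianchi type I data on the singularity the normalisation $\tr\msK^{2}+\Phi_{1}^{2}=1$ together with $\msK=\roId/3$ forces $\Phi_{1}^{2}=2/3\neq 0$, so such data are automatically matter dominated; this is why in Definition~\ref{def:id fixed vartheta zero} one restricts to the non-degenerate class $\bphi_{1}\neq 0$, which is then part of the domain of $\mfR_{\mrI,\iso}^{\vartheta}$ by definition. The substantive step is to prove that the condition ``$\md[V](\mfI)$ has a crushing singularity and induces data on the singularity'' is open in $\mfI\in {}^{\rosc}\mB_{\mrI,\rond}^{\iso}[V](\vartheta)$, and that the limiting data $(\msK,\msH,\Phi_{0},\Phi_{1})$ depend smoothly on $\mfI$. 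I would obtain both by showing that the rescaled solution converges to its limit at an exponential rate uniform on compact sets of data --- this is where $V\in\mfP_{\a_{V}}^{\infty}$ is used. Combined with smooth dependence of solutions on data (Proposition~\ref{prop:unique max dev} and the smooth structure on ${}^{\rosc}\mfB_{\mrI,\rond}^{\iso}[V](\vartheta)$ fixed after Definition~\ref{def:id fixed vartheta zero}), this yields that ${}^{\rosc}_{\roqu}\mfB_{\mrI,\rond}^{\iso}[V](\vartheta)$ is open in ${}^{\rosc}\mfB_{\mrI,\rond}^{\iso}[V](\vartheta)$ and that $\mfR_{\mrI,\iso}^{\vartheta}$ is smooth.

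Still at fixed $\vartheta$, I would then construct the inverse. Given $\mfI_{\infty}\in {}^{\rosc}\mfS_{\mrI}^{\iso}$, Theorem~\ref{thm:dataonsingtosolution} supplies a unique development $\md[V](\mfI)$ inducing it, necessarily with a crushing singularity; since $\vartheta>[3v_{\max}(V)]^{1/2}$ and, for isotropic Bianchi type I, $\theta$ is monotone with $\dot\theta=-\tfrac{3}{2}\phi_{t}^{2}$, its mean curvature attains the value $\vartheta$ at exactly one slice. Sending $\mfI_{\infty}$ to the data induced at that slice --- which lies in ${}^{\rosc}\mB_{\mrI,\rond}^{\iso}[V](\vartheta)$ whenever $\phi_{t}\neq 0$ there --- furnishes a set-theoretic inverse of $\mfR_{\mrI,\iso}^{\vartheta}$ on its image, and its smoothness I would again draw from the parameter-dependent asymptotic construction behind Theorem~\ref{thm:dataonsingtosolution}. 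Injectivity of $\mfR_{\mrI,\iso}^{\vartheta}$ is then immediate from the uniqueness in Theorem~\ref{thm:dataonsingtosolution} via Lemma~\ref{lemma:iso ri iso ids incl type}, so $\mfR_{\mrI,\iso}^{\vartheta}$ is a diffeomorphism onto its image, which is open by the open-condition characterisation of the second paragraph. The compatibility assertion --- that (\ref{eq:iota vthonexo eq iota vthtwoextwo I iso}) forces $\mfR_{\mrI,\iso}^{\vartheta_{1}}(x_{1})=\mfR_{\mrI,\iso}^{\vartheta_{2}}(x_{2})$ --- is then just Lemma~\ref{lemma:iso reg data iso dos}: if the two developments coincide up to isometry, they induce isometric data on the singularity.

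For the gluing, each $\iota_{\vartheta}$ is a homeomorphism onto an open subset of ${}^{\rosc}\mfD_{\mrI,\roc}^{\iso}[V]$ (Lemma~\ref{lemma:dev smo str isoI}), so ${}^{\rosc}_{\roqu}\mfD_{\mrI,\roc}^{\iso}[V]=\bigcup_{\vartheta}\iota_{\vartheta}({}^{\rosc}_{\roqu}\mfB_{\mrI,\rond}^{\iso}[V](\vartheta))$ is open in ${}^{\rosc}\mfD_{\mrI,\roc}^{\iso}[V]$; the compatibility above makes $\mfR_{\mrI,\iso}$ well defined and, read through the chart $\iota_{\vartheta}^{-1}$, equal to $\mfR_{\mrI,\iso}^{\vartheta}$, hence a local diffeomorphism. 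It is injective by the uniqueness in Theorem~\ref{thm:dataonsingtosolution}, and surjective onto ${}^{\rosc}\mfS_{\mrI}^{\iso}$ because the development supplied by Theorem~\ref{thm:dataonsingtosolution} for a given $\mfI_{\infty}$ is matter dominated ($\Phi_{1}\neq 0$), so $\phi_{t}\neq 0$ near the singularity, whence for $\vartheta$ large its $\vartheta$-CMC slice lies in ${}^{\rosc}\mB_{\mrI,\rond}^{\iso}[V](\vartheta)$; thus $\mfR_{\mrI,\iso}$ is a diffeomorphism. The hard part will be the forward-direction step of the second paragraph: establishing smooth dependence of the data on the singularity on the finite-time data --- and openness of the quiescent set --- for a general $V\in\mfP_{\ropar}\cap\mfP_{\a_{V}}^{\infty}$, which may be unbounded. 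Because the potential destroys the monotonicity available in the Einstein scalar field case, this forces a fixed-point (Fuchsian) analysis with parameters, in which the exponential convergence rates must be controlled uniformly on compact parameter sets and smooth dependence on the parameters tracked throughout; that is the real content of Section~\ref{section:Reg map from dev to dos}. A secondary difficulty is that the quiescent set is, in general, a proper open subset of ${}^{\rosc}\mfB_{\mrI,\rond}^{\iso}[V](\vartheta)$ --- its complement consisting of the exceptional de Sitter-like solutions discussed in the introduction --- so it is genuinely openness, not equality with the whole space of fixed-$\vartheta$ data, that must be established.
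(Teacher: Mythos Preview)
Your proposal is correct and follows the same template as the paper. The ``hard part'' you identify is packaged in the paper as Lemma~\ref{lemma:reg of B} (openness of the set of admissible convergent solutions and surjectivity of the differential of $\Psi_\mfT^\mfs$), together with Lemmas~\ref{lemma:from adm to dos} and~\ref{lemma:local repr map to sing}; the non-degeneracy condition $\bphi_1\neq 0$ enters precisely as the hypothesis $q>-1$ of Lemma~\ref{lemma:local repr map to sing} (Remark~\ref{remark:q big m one eq rond}), which is how the paper justifies the local-diffeomorphism step.

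Two minor points where your citations are off. First, for injectivity of $\mfR_{\mrI,\iso}^{\vartheta}$ you invoke Lemma~\ref{lemma:iso ri iso ids incl type}, but that lemma goes the wrong direction (isometric regular data $\Rightarrow$ isometric singularity data). The paper instead uses Proposition~\ref{prop:iso idos to iso dev}: an isometry $\chi$ of data on the singularity lifts to $\chi\times\roId$ on the developments, and then the strict monotonicity of $\theta$ from Lemma~\ref{lemma:bth large enough} forces the $\vartheta$-slices to match. Second, for the compatibility statement (\ref{eq:iota vthonexo eq iota vthtwoextwo I iso}) you cite Lemma~\ref{lemma:iso reg data iso dos}, but that lemma concerns isometric \emph{regular} data, whereas $x_1$ and $x_2$ sit at different mean curvatures and are not isometric as initial data sets. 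The paper handles this (and the possibility that the $\vartheta_2$-slice of the first development has $\phi_t=0$) by passing to a common $\vartheta_3>\vartheta_i$ at which both developments have non-degenerate data, using Lemma~\ref{lemma:bth large enough}, and then invoking injectivity of $\iota_{\vartheta_3}$. Your route can be repaired the same way.
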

\begin{remark}
  An observation similar to Remark~\ref{remark:map to sing diffeo} holds in this case as well.
\end{remark}
\begin{proof}
  The proof is to be found in Section~\ref{section:Reg map from dev to dos}. 
\end{proof}

\section[Asymptotics, Singularity, General Case]{Asymptotics in the direction of the singularity, the general case}\label{ssection:asdirofsing}
Our next goal is to derive the asymptotics in the direction of the singularity. In particular, we want to prove that solutions induce data on
the singularity in the sense of Definition~\ref{def:ndvacidonbbssh}. However, this does not always occur. One exception is, e.g., the case of
vacuum Bianchi type VIII and IX solutions, where, by vacuum, we mean that the potential and the scalar field vanish identically. In this case,
the asymptotics are oscillatory, if one excludes LRS solutions; see
\cite[Theorem~5, p.~727]{cbu}. On the other hand, in the case of the Einstein scalar field equations, there is a clear dichotomy: either the
solution is vacuum, or, if there is a non-trivial matter content, the matter dominates and gives rise to convergent behaviour; see
\cite[Theorem~19.1, p.~478]{BianchiIXattr} (keeping in mind that scalar field matter is equivalent to an orthogonal stiff perfect fluid). 
The reason for this dichotomy is that in the Einstein scalar field case, there is a monotonicity ensuring that the expansion normalised energy
density increases in the direction of the singularity. Introducing a non-zero potential, this monotonicity is destroyed. This causes significant
complications, and these complications are, in part, the cause of the length of this article. 

In the presence of a potential, there is still a dichotomy between matter and vacuum dominated asymptotic behaviour in the direction of the
singularity. However, the dichotomy is not characterised by initial data, but rather by the asymptotics. Moreover, in order to obtain the
dichotomy, we need to impose conditions on the potential. In what follows, we therefore assume that $V\in \mfP_{\a_V}^1$; see
Definition~\ref{def:mfP a inf}. The restrictions on $\a_{V}$ depend on the Bianchi type: $\a_{V}\in (0,1)$ in the case of anisotropic Bianchi type I
and non-LRS Bianchi type II; and $\a_{V}\in (0,1/3)$ otherwise. Note that we here, as always, exclude isotropic and LRS Bianchi type VII${}_0$.

\begin{prop}\label{prop:dichotomy}
  Let $\mfT$ be a Bianchi class A type, $\mfs\in\{\iso,\roLRS,\roper,\rogen\}$ and $V\in \mfP_{\a_V}^1$ be non-negative. Assume that $\a_{V}\in (0,1)$ in
  the case of anisotropic Bianchi type I and non-LRS Bianchi type II; and that $\a_{V}\in (0,1/3)$ otherwise. Let $\mfI\in \mB_{\mfT}^\mfs[V]$, assume
  that $\tr_{\bge}\bk\geq 0$; that $(\mfT,\mfs)\neq (\mrI,\iso)$; and that $\mfI\in \mB_{\mrIX,+}[V]$ in case $\mfT=\mrIX$. Let $(M,g,\phi)=\mD[V](\mfI)$.
  Then the associated existence interval is of the form $(0,t_+)$ and $\theta(t)\rightarrow\infty$ as $t\downarrow 0$. Moreover, there are two
  possibilities. Either there is a $t_{0}>0$ and a $C\in\rn{}$ such that $|\theta(t)\phi_{t}(t)|\leq C$ for all $t\leq t_{0}$; or $\phi_{t}(t)/\theta(t)$
  converges to a non-zero limit as $t\downarrow 0$. 
\end{prop}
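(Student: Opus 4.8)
The plan is to reduce the statement to the conservation-law form of the scalar field equation together with a single elementary monotonicity. Throughout I will use the Raychaudhuri-type identity $\theta_t=-\theta^2-\bS+3V(\phi)$, which follows from the evolution equation for the mean curvature and the Hamiltonian constraint (\ref{eq:ham con original}); the same constraint, rewritten as $\tfrac23\theta^2=\phi_t^2+2V(\phi)+|\sigma|^2-\bS$ with $\sigma$ the trace-free part of $\bk$, shows that $|\phi_t|\le\sqrt{2/3}\,\theta$ away from Bianchi type $\mrIX$, and the analogous bound holds in type $\mrIX$ under the hypothesis $\mfI\in\mB_{\mrIX,+}[V]$. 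That the existence interval is $(0,t_+)$ with $\theta\to\infty$ as $t\downarrow 0$, together with the quantitative inputs $\theta(t)\sim t^{-1}$ near $0$ (so that $\int_0^{t_0}\theta\,dt=\infty$, $\int_0^{t_0}\theta^{-1}\,dt<\infty$ and $\sqrt{\det\bge}\sim t$) and the relevant control of $\bS/\theta^2$, I would take from Lemmas~\ref{lemma:BianchiAdevelopment} and \ref{lemma:Bianchi IX remainder}; the latter is where the positivity hypothesis enters, ruling out the de Sitter- and Einstein-static-like Bianchi $\mrIX$ solutions. Combining $|\phi_t|\le\sqrt{2/3}\,\theta$ with $\theta\sim t^{-1}$ gives the a priori bound $|\phi(t)|\le c_1+c_2|\ln t|$, and hence, since $V\in\mfP_{\a_V}^1$ with $\a_V$ in the stated range, the integrability $\int_0^{t_0}\theta^{-1}\big(V(\phi)+|V'(\phi)|+|\bS|\big)\,dt<\infty$.

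\textbf{The conserved momentum.} Set $\varpi:=\sqrt{\det\bge}\,\phi_t$. In divergence form the scalar field equation is $\varpi_t=-\sqrt{\det\bge}\,V'(\phi)$, and by the bounds above the right-hand side is integrable on $(0,t_0]$, so $\varpi(t)$ converges to a finite limit $\kappa$ as $t\downarrow 0$. On the other hand $\tfrac{d}{dt}\ln(\sqrt{\det\bge}\,\theta)=(3V(\phi)-\bS)/\theta$ is integrable on $(0,t_0]$ as well, so $\sqrt{\det\bge}\,\theta$ converges to a strictly positive limit; therefore $\phi_t/\theta=\varpi/(\sqrt{\det\bge}\,\theta)$ converges, to $x_-:=\kappa/\lim_{t\downarrow 0}(\sqrt{\det\bge}\,\theta)$. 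In the Bianchi types that may be oscillatory, convergence of $\sqrt{\det\bge}\,\theta$ is only needed when $\kappa\neq 0$, and in that case $\phi_t^2/\theta^2$ is bounded below, which, as in the Einstein scalar field analysis, suppresses the oscillations and restores the needed decay of $\bS/\theta^2$.

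\textbf{The dichotomy.} If $\kappa\neq 0$ then $x_-\neq 0$, which is the second alternative. If $\kappa=0$, then $\varpi(t)=-\int_0^t\sqrt{\det\bge}(s)\,V'(\phi(s))\,ds$, whence
\[
|\theta(t)\phi_t(t)|\le\int_0^t\frac{\theta(t)\,\sqrt{\det\bge}(s)}{\sqrt{\det\bge}(t)}\,|V'(\phi(s))|\,ds .
\]
Since $\sqrt{\det\bge}(s)/\sqrt{\det\bge}(t)=\exp\!\big(-\int_s^t\theta\,dr\big)$ and $\theta\sim t^{-1}$, the substitution $v=\int_s^t\theta\,dr$ turns the kernel $\theta(t)\sqrt{\det\bge}(s)/\sqrt{\det\bge}(t)$ into something comparable to $e^{-2v}$; inserting the a priori bound $|\phi(s)|\le c_1+c_2|\ln s|$ and integrating gives $|\theta\phi_t|(t)\lesssim t^{-2\a_V}$ near $0$, hence $|\phi_t|\lesssim t^{1-2\a_V}$, hence (as $\a_V<1$) $\phi$ is bounded near $0$. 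Re-running the estimate with $|V'(\phi)|$ now bounded yields $|\theta(t)\phi_t(t)|\le C$ near $0$, the first alternative.

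\textbf{The main obstacle.} The difficult point is not the dichotomy as such but the inputs of the first paragraph: establishing $\theta\sim t^{-1}$ and the decay of $\bS/\theta^2$ sharply enough for the integrals above to converge, in the absence of the monotonicity of the expansion-normalised energy density that is available when $V\equiv 0$. This is the analysis that has to be carried out type by type, and that dictates the split $\a_V\in(0,1)$ (anisotropic type I, non-LRS type II) versus $\a_V\in(0,1/3)$ (the remaining types, where one has less control on $\theta$ and on $\bS/\theta^2$).
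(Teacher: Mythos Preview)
Your momentum $\varpi$ is very close to the paper's device. In the expansion-normalised time $\tau$ of Section~\ref{ssection:whsuform} one has $\sqrt{\det\bge}=\sqrt{\det\bge}(0)\,e^{3\tau}$ and $\varpi=\tfrac{1}{3}\sqrt{\det\bge}(0)\,e^{3\tau}\theta\,\phi_\tau$, while the paper's monotone quantity is $F=\exp\big({-}\int_\tau^0 2(2-q)\,ds\big)\Psi$; a short computation gives $\theta(0)^2F=\tfrac{3}{2}(\det\bge(0))^{-1}\varpi^2+3e^{6\tau}V(\phi)$, so $\kappa\neq0$ is exactly $F_0>0$ and the two dichotomies coincide. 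The advantage of $F$ is that $F'=18\,V(\phi)\,e^{6\tau}/\theta(0)^2$ contains neither $\bS$ nor $\theta$, so convergence of $F$ follows from $|\phi_\tau|\le\sqrt6$ and $V\in\mfP^0_{\a_V}$ alone, uniformly in the Bianchi type. Your $\varpi'$ involves $\theta^{-1}$, so you need an a priori lower bound on $\theta$; this is \emph{not} supplied by Lemmas~\ref{lemma:BianchiAdevelopment}--\ref{lemma:Bianchi IX remainder} (which give only $\theta\to\infty$ and non-integrability, not $\theta\sim t^{-1}$), but by Lemma~\ref{lemma:X growth} ($\theta\ge c_\theta e^{-\tau}$) and, for anisotropic type~I and non-LRS type~II, by Lemma~\ref{lemma:BIBIIasympt} ($e^{3\tau}\theta\to\theta_\infty>0$). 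With those inputs your convergence of $\varpi$ does go through, and for $\kappa=0$ your integral bootstrap (using only $\theta(t)/\theta(s)\le1$ from the monotonicity of $\theta$) is a legitimate alternative to the paper's direct ODE argument for $\theta\phi_t$.

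The genuine gap is Bianchi type IX with $\kappa\neq0$. You assert that then $\phi_t^2/\theta^2$ is bounded below, ``which suppresses the oscillations''; but this is circular. From $\varpi\to\kappa\neq0$ you only know that $e^{3\tau}\theta\,\phi_\tau$ is bounded away from zero, and to deduce that $|\phi_\tau|$ is bounded below you need $e^{3\tau}\theta$ bounded \emph{above}. Since $(e^{3\tau}\theta)'=(2-q)\,e^{3\tau}\theta$ and in type IX one can have $q>2$ (the spatial-curvature contribution has the wrong sign), this is precisely the point at issue. The paper breaks the circle with the identity $(N_1N_2N_3)^{1/3}\,e^{3\tau}\theta/\theta(0)=e^{2\tau}\cdot\mathrm{const}$; combined with $\Psi\le1$ (from $\mfI\in\mB_{\mrIX,+}[V]$) and $F_0>0$ this gives $(N_1N_2N_3)^{1/3}\le Ce^{2\tau}$, after which the elementary bound of Lemma~\ref{lemma: bS low bd} yields $2-q\ge -Ce^{4\tau}$, hence $\int_{-\infty}^0|2-q|<\infty$ and $e^{3\tau}\theta$ converges. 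Your appeal to ``the Einstein scalar field analysis'' does not supply this step: in the $V\equiv0$ case the monotonicity of $\Psi$ gives the lower bound on $\phi_\tau^2$ directly, and that monotonicity is exactly what the potential destroys.
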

\begin{remark}\label{remark:aV in zero one third}
  It is natural to conjecture that the result holds for all $\a_V\in (0,1)$. However, the assumption $\a_{V}\in (0,1/3)$ is needed for our method of
  proof to work for the higher Bianchi types. 
\end{remark}
\begin{proof}
  The statement follows from Lemma~\ref{lemma:BianchiAdevelopment} and Theorem~\ref{thm:dichotomy}, keeping the Wainwright-Hsu formulation of
  Section~\ref{ssection:whsuform} in mind.
\end{proof}
In addition, if $\mfI\in\mB_{\mrIX,\roap}[V]$, then either Proposition~\ref{prop:dichotomy} applies or the following conclusion holds.
\begin{prop}\label{prop:dichotomy roap}
  Let $0\leq V\in C^\infty(\ro)$ and $\mfI\in \mB_{\mrIX,\roap}[V]$ be such that $\mfI\notin \mB_{\mrIX,+}[V]$. Let $(M,g,\phi)=\mD[V](\mfI)$. Then
  $|\phi_t(t)/\theta(t)|$ converges to $(2/3)^{1/2}$ as $t\downarrow 0$. 
\end{prop}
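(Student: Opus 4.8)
The plan is to read the conclusion off the Hamiltonian constraint once it is shown that, in the direction of the singularity, the two ``potential-type'' contributions $X(t)/\theta^2(t)$ and $V(\phi(t))/\theta^2(t)$ both tend to zero; here $X(t):=\mfX[\theta(t),\phi(t),\phi_t(t)]=\theta^2(t)-\tfrac{3}{2}\phi_t^2(t)-3V(\phi(t))$, see (\ref{eq:mfX def}).

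First I would record that, since $\mfI\in\mB_{\mrIX,\roap}[V]\subseteq\mB_{\mrIX,\ropp}[V]$, the development $(M,g,\phi)$ has a crushing singularity: by Lemma~\ref{lemma:BianchiAdevelopment} and Lemma~\ref{lemma:Bianchi IX remainder} (cf. the discussion preceding Definition~\ref{def:plus ap pp nd}) we may take the existence interval to be $J=(0,t_+)$, there is a $t_1\in J$ with $\theta(t)>0$ and $\theta_t(t)<0$ for all $t\le t_1$, and $\theta(t)\to\infty$ as $t\downarrow 0$. In particular $\phi_t/\theta$ is well defined on $(0,t_1]$, which is all that is needed for the assertion.

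Next I would unpack the two hypotheses. The assumption $\mfI\notin\mB_{\mrIX,+}[V]$ says precisely that there is no $t\in J$ with $X(t)>0$ and $\theta(t)>0$; since $\theta(t)>0$ for $t\le t_1$, this forces $X(t)\le 0$ for all $t\le t_1$. The assumption $\mfI\in\mB_{\mrIX,\roap}[V]$ provides, for every $\e>0$, a $T\in J$ with $X(t)/\theta^2(t)>-\e$ and $V(\phi(t))/\theta^2(t)<\e$ for all $t\le T$. Intersecting these with $(0,t_1]$ and using $X(t)\le 0$ and $V\ge 0$, one obtains $-\e<X(t)/\theta^2(t)\le 0$ and $0\le V(\phi(t))/\theta^2(t)<\e$ for $t\le\min(t_1,T)$. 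As $\e>0$ was arbitrary, this gives $X(t)/\theta^2(t)\to 0$ and $V(\phi(t))/\theta^2(t)\to 0$ as $t\downarrow 0$.

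Finally I would invoke the Hamiltonian constraint in the form (\ref{eq:ham con ito X}); in view of (\ref{eq:mfX def}) this is the identity $X(t)=\theta^2(t)-\tfrac{3}{2}\phi_t^2(t)-3V(\phi(t))$, so dividing by $\theta^2(t)>0$ gives
\[
\frac{\phi_t^2(t)}{\theta^2(t)}=\frac{2}{3}\left(1-\frac{X(t)}{\theta^2(t)}-\frac{3V(\phi(t))}{\theta^2(t)}\right).
\]
Letting $t\downarrow 0$ and using the two limits just established yields $\phi_t^2(t)/\theta^2(t)\to 2/3$, i.e. $|\phi_t(t)/\theta(t)|\to(2/3)^{1/2}$, as claimed. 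The only point requiring care is the first manipulation: correctly using that ``$\mfI\notin\mB_{\mrIX,+}[V]$'' is the negation of an existential statement, and lining up the various neighbourhoods of the singularity on which the almost-positivity bounds, the positivity of $\theta$, and the crushing behaviour hold simultaneously. In contrast to Proposition~\ref{prop:dichotomy}, no appeal to the dynamical analysis behind Theorem~\ref{thm:dichotomy} is needed here, because the hypotheses already fix the sign of $X$ near the singularity and hence squeeze $X/\theta^2$ to zero; there is no monotonicity argument to run and no dichotomy to resolve.
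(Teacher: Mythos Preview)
Your proof is correct and follows essentially the same approach as the paper. The paper's proof is extremely terse (``By assumption, $X(t)/\theta^2(t)\rightarrow 0$ and $V\circ\phi(t)/\theta^2(t)\rightarrow 0$ as $t\downarrow 0$. The conclusion follows.''); you have unpacked this correctly, in particular making explicit why $X(t)\le 0$ near the singularity (from $\mfI\notin\mB_{\mrIX,+}[V]$ combined with $\theta>0$ there), which is the half of the squeeze that the paper leaves implicit.
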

\begin{proof}
  By assumption, $X(t)/\theta^2(t)\rightarrow 0$ and $V\circ\phi(t)/\theta^2(t)\rightarrow 0$ as $t\downarrow 0$. The conclusion follows. 
\end{proof}
With the above observations in mind, it is natural to introduce the following terminology. 
\begin{definition}\label{def:matter and vacuum dominated}
  A Bianchi class A non-linear scalar field development as in Proposition~\ref{prop:dichotomy} or Proposition~\ref{prop:dichotomy roap}  is said to
  be \textit{matter dominated}
  \index{Matter dominated!Development}%
  \index{Matter dominated!Singularity}%
  \index{Development!Matter dominated}%
  \index{Singularity!Matter dominated}%
  if $\phi_{t}(t)/\theta(t)$ converges to a non-zero limit as $t\downarrow 0$, and is said to be \textit{vacuum dominated}
  \index{Vacuum dominated!Development}%
  \index{Vacuum dominated!Singularity}%
  \index{Development!Vacuum dominated}%
  \index{Singularity!Vacuum dominated}%
  otherwise. 
\end{definition}
\begin{remark}
  It is natural to ask if only vacuum solution are vacuum dominated. However, this is not the case. In fact,
  Theorem~\ref{thm:dataonsingtosolution} applies with $\Phi_{1}=0$ in the case of Bianchi types I, II, VI${}_{0}$, VII${}_{0}$ and LRS Bianchi types
  VIII and IX. The corresponding solutions are clearly not matter dominated. Moreover, in the case that the potential is such that there are no
  constant solutions to the non-linear scalar field equation, and in the case that $\Phi_{0}$ is such that $V'(\Phi_{0})\neq 0$, we can clearly
  conclude that the scalar field is non-constant, so that the solution is not a vacuum solution. In the case of Bianchi type VIII and IX solutions
  which are neither isotropic nor LRS, all the solutions for which $\phi$ is constant are vacuum dominated, but proving the existence of vacuum
  dominated solutions for which $\phi$ is non-constant is more difficult. 
\end{remark}
Next, we state the conclusions concerning the asymptotics in the direction of the singularity.
\begin{thm}\label{thm:dev inducing data on the sing}
  Let $\mfT$ be a Bianchi class A type, $\mfs\in\{\iso,\roLRS,\roper,\rogen\}$ and $V\in \mfP_{\a_V}^1$ be non-negative, where $\a_V\in (0,1)$ in case
  of Bianchi type I and non-LRS Bianchi type II; and $\a_V\in (0,1/3)$ otherwise. Assume $(\mfT,\mfs)\neq (\mrI,\iso)$ and let
  $\mfI\in\mB_{\mfT}^\mfs[V]$ with $\tr_{\bge}\bk\geq 0$. In case
  $\mfT=\mrIX$ assume, in addition, that $\mfI\in\mB_{\mrIX,\ropp}^\mfs[V]$. Then the development $\mD[V](\mfI)$ induces initial data on the singularity
  unless it is vacuum dominated, $\mfs=\rogen$ and $\mfT\in\{\mrVIII,\mrIX\}$. Finally, if $\mfs=\rogen$, $\mfT\in\{\mrVIII,\mrIX\}$ and
  $\mD[V](\mfI)$ is vacuum dominated, then the expansion normalised Weingarten map $\mK$ does not converge. In fact, the $\a$-limit set of the
  eigenvalues of $\mK$ contains two distinct points on the Kasner circle and the line connecting them. Moreover, $\bS/\theta^2$ does not converge
  to zero. 
\end{thm}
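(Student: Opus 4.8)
The plan is to work throughout in the Wainwright--Hsu expansion-normalised formulation of the Bianchi class A equations (Section~\ref{ssection:whsuform}), with a time coordinate $\tau$ relative to which the singularity sits at $\tau\to+\infty$. By Proposition~\ref{prop:dichotomy} — together with Proposition~\ref{prop:dichotomy roap} in the case $\mfT=\mrIX$, $\mfI\notin\mB_{\mrIX,+}[V]$, which then forces $\phi_t/\theta\to\pm(2/3)^{1/2}$ and hence the matter dominated alternative — the existence interval is $(0,t_+)$, $\theta\to\infty$ as $t\downarrow 0$, and the development is either matter dominated or vacuum dominated in the sense of Definition~\ref{def:matter and vacuum dominated}. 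The first step is to dispose of the potential: since $\theta\to\infty$ and, from the scalar-field equation, $|\phi|\leq(\sqrt{2/3}+o(1))|\ln\theta|$ towards the singularity (with $\phi\sim\Phi_0-\Phi_1\ln\theta$, $|\Phi_1|\leq\sqrt{2/3}$, when matter dominated, and $\phi=o(\ln\theta)$ when vacuum dominated), the hypothesis $V\in\mfP_{\a_V}^1$ with $\a_V<1$ gives $3V(\phi)/\theta^2=O(\theta^{2\a_V-2})$ and $V'(\phi)/\theta^2=O(\theta^{2\a_V-2})$, so that the potential contributes only a perturbation of the vacuum Wainwright--Hsu system which decays at a definite rate (the narrower range $\a_V\in(0,1/3)$ being what is needed to keep this perturbation negligible through a Bianchi~II transition for the higher Bianchi types, cf.\ Remark~\ref{remark:aV in zero one third}). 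The analysis therefore reduces to a small, decaying perturbation of the vacuum Bianchi class A flow.

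Next I would handle the cases in which the expansion-normalised Weingarten map $\mK$ converges: (a) the matter dominated case, for all $\mfT,\mfs$; (b) the vacuum dominated case with $\mfT\in\{\mrI,\mrII,\mrVIz,\mrVIIz\}$; and (c) the vacuum dominated case with $\mfT\in\{\mrVIII,\mrIX\}$ and $\mfs\in\{\iso,\roLRS\}$. In (a) one has $\Sigma_\phi:=\phi_t/\theta\to\Phi_1\neq 0$, and the positive lower bound on $\Sigma_\phi^2$ near the singularity forces, as in the Einstein scalar field (stiff fluid) setting (cf.\ \cite{BianchiIXattr}, \cite{RinQCSymm}), the curvature variables $N_i$ and the shear $(\Sp,\Sm)$ to converge, the potential entering only through the decaying perturbation above; this is where most of the analytic work lies, since the potential destroys the monotone quantity available in the pure scalar field case, and it is carried out through the Wainwright--Hsu equations and Theorem~\ref{thm:dichotomy}. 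In (b) and (c) the corresponding vacuum flow is itself convergent (to a point of the Kasner circle for the lower types; to a Taub, respectively isotropic, solution in the LRS/isotropic VIII and IX cases) and the decaying perturbation does not alter this. Once convergence is established I would read off the candidate data on the singularity $(G,\msH,\msK,\Phi_0,\Phi_1)$ with $\msK:=\lim\mK$, $\Phi_1:=\lim\theta^{-1}\phi_t$, $\Phi_0:=\lim(\phi+\theta^{-1}\phi_t\ln\theta)$ (the logarithmic subtraction being precisely what makes $\phi$ converge in an asymptotically Kasner regime), and $\msH$ via the limit (\ref{eq:bAmetriclimit}), obtained by integrating the metric ODEs and checking that the deviation from pure power-law growth is integrable (using $N_i\to 0$ and the potential rate). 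It then remains to verify conditions (1)--(4) of Definition~\ref{def:ndvacidonbbssh}: condition (1) and $\tr\msK^2+\Phi_1^2=1$ in (2) are immediate from $\tr\mK=1$, symmetry of $\mK$, and the limit of the Hamiltonian constraint (\ref{eq:nuvhc}) with $3V/\theta^2\to 0$; $\rodiv_{\msH}\msK=0$ follows from left invariance and the limit of the commutation relation $[\bn,K]=0$; condition (4) is vacuous when matter dominated (an eigenvalue $1$ would force $\tr\msK^2\geq 1$, contradicting $\Phi_1\neq 0$) and otherwise is supplied by the LRS/isotropic structure, which is preserved by the flow (Proposition~\ref{prop:unique max dev}); and condition (3), the only one needing an argument, is handled as follows: if the limiting point has an eigenvalue $p_A\leq 0$ (necessarily simple, and in fact $<0$, by the Remark following Definition~\ref{def:ndvacidonbbssh}), then in the frame that simultaneously diagonalises $\bn$ and $K$ the equation for the corresponding $N_A$ has, at the limit, a coefficient of the opposite sign to $p_A$, so $N_A$ would grow towards the singularity unless $N_A\equiv 0$; boundedness of $\mK$ forces $N_A\equiv 0$, i.e.\ $\bn_A=0$, which is exactly the statement that the subspace perpendicular to the $p_A$-eigenspace is a subalgebra. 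By Lemma~\ref{lemma:iso reg data iso dos} the induced data is well defined up to isometry, so the development induces data on the singularity in all of (a), (b), (c).

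It remains to treat $\mfT\in\{\mrVIII,\mrIX\}$, $\mfs=\rogen$, vacuum dominated, which I expect to be the main obstacle. Here $\Sigma_\phi\to 0$ and $3V/\theta^2\to 0$, so the Wainwright--Hsu system is asymptotically autonomous with limit system the vacuum Bianchi VIII/IX flow. I would first show $\mK$ does not converge: if $\mK\to\msK$, then $\msK$ is a fixed point of the vacuum flow, hence corresponds to a point of the Kasner circle; at a generic Kasner point exactly one exponent $p_A$ is negative and the $N_A$-direction is linearly unstable towards the singularity, while at a Taub point the approach is only along a centre-stable manifold of LRS solutions; in either case a genuinely Bianchi VIII or IX, non-LRS solution (the non-LRS property being preserved by the flow, Proposition~\ref{prop:unique max dev}) cannot converge there, and the decaying scalar-field perturbation does not change this. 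Since $\mK$ does not converge, no data on the singularity is induced. Second, I would identify the $\a$-limit set $\mathcal{A}$ of the Wainwright--Hsu orbit: by the theory of asymptotically autonomous systems it is nonempty, compact, connected and invariant under the vacuum Bianchi VIII/IX flow; since $\mK$ does not converge $\mathcal{A}$ is not a single point, and it is not contained in the Kasner circle (otherwise all $N_i$ would tend to zero and the solution would converge to a single Kasner point); being compact and invariant it therefore contains, together with some point at which an $N_i\neq 0$, the whole vacuum orbit through that point and its $\a$- and $\omega$-limit points, which — invoking the structure of the vacuum Bianchi attractor of \cite{cbu}, \cite{BianchiIXattr} — includes a Bianchi~II (Taub) heteroclinic orbit joining two distinct Kasner points. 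Since Bianchi~II vacuum orbits project to straight line segments in the shear plane $(\Sp,\Sm)$, and the eigenvalues of $\mK$ are an affine function of $(\Sp,\Sm)$, the $\a$-limit set of the eigenvalues of $\mK$ contains two distinct points of the Kasner circle together with the line segment joining them. Finally, $\bS/\theta^2=-\tr\mathcal{N}^2+\tfrac12(\tr\mathcal{N})^2$ with $\mathcal{N}:=\bn/\theta$; were this to converge to zero all $\mathcal{N}_i$ would tend to zero and, once more, $\mK$ would converge, a contradiction; hence $\bS/\theta^2$ does not converge to zero.

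The routine parts are the bookkeeping around Definition~\ref{def:ndvacidonbbssh} and the metric limit (\ref{eq:bAmetriclimit}), which are ODE estimates once the potential has been shown negligible. The two genuinely hard parts are: (i) the convergence of $\mK$ in the matter dominated case for the higher Bianchi types, where the potential has removed the monotone quantity of the pure Einstein scalar field setting — this is why the hypothesis is strengthened to $\a_V\in(0,1/3)$, and it is where the analytic effort is concentrated (Wainwright--Hsu equations, Theorem~\ref{thm:dichotomy}); and (ii) the non-convergence and $\a$-limit-set statements in the vacuum dominated generic Bianchi VIII/IX case, which require transporting the oscillatory vacuum results of \cite{cbu}, \cite{BianchiIXattr} to the asymptotically autonomous perturbed system and checking that the non-LRS condition rules out convergence to any Kasner or Taub point.
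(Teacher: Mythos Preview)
Your case decomposition and the treatment of the convergent cases (matter dominated for all types; vacuum dominated for $\mfT\in\{\mrI,\mrII,\mrVIz,\mrVIIz\}$; vacuum dominated isotropic/LRS VIII and IX) match the paper's route via Propositions~\ref{prop:matter dom data on sing}, \ref{prop:ap not plus}, \ref{prop:I and II not iso as}, \ref{prop:VIz and VIIz not iso as} and~\ref{prop:LRS VIII and IX}. Your verification of condition~(3) of Definition~\ref{def:ndvacidonbbssh} via growth of $N_A$ when $p_A<0$ is exactly the mechanism used (it is established inside the proof of Theorem~\ref{thm:dichotomy}). One correction: in the paper's convention the singularity lies at $\tau\to-\infty$, not $+\infty$; this flips the signs in your stability discussion though not the conclusions.

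The genuine gap is in the oscillatory case. You assert that by the theory of asymptotically autonomous systems the $\a$-limit set is nonempty, compact and invariant under the vacuum flow, but that theory presupposes the orbit is precompact, and in vacuum dominated generic Bianchi VIII/IX the $N_i$ are \emph{not} a priori bounded toward the singularity: two of them can tend to $+\infty$ while the third and the pairwise products tend to zero (Lemma~\ref{lemma:BVIIIcontrNiconcl}). Establishing that a finite $\a$-limit point exists at all is precisely the content of Propositions~\ref{prop:alpexBIX} and~\ref{prop:alpexBVIII}, where the paper does substantial work generalising \cite{cbu} (via Lemmas~\ref{lemma:bounding Sp BIX contr arg} and~\ref{eq:Spdoesnotconvtomone}) to accommodate the decaying $\Omega$ term; only then do Lemma~\ref{lemma:NoNtNthconvtozero} and the subsequent proposition yield a non-special $\a$-limit point, and Corollary~\ref{cor:BianchiVIIIvacuumas} the Bianchi~II heteroclinic in the $\a$-limit set. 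Your argument that $\bS/\theta^2\not\to 0$ is also incomplete: in Bianchi VIII with $N_1<0$, $N_2,N_3>0$, the vanishing of $\bS/\theta^2$ forces $N_1\to 0$ and $N_2-N_3\to 0$ but not $N_2,N_3\to 0$ individually, so you cannot conclude $\mK$ converges; the direct argument is that the $\a$-limit set contains an interior point of a Bianchi~II orbit, where $\bS/\theta^2=-N_1^2/2\neq 0$.
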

\begin{remark}
  Remark~\ref{remark:aV in zero one third} is equally relevant in this case.
\end{remark}
\begin{remark}
  In the statement of the theorem, the \textit{Kasner plane}, say $P_K$, is represented by the set of $q\in\rn{3}$ satisfying $q_1+q_2+q_3=1$ and
  the \textit{Kasner circle} is represented by the $q\in P_K$ such that $q_1^2+q_2^2+q_3^2=1$. Moreover, by the $\a$-limit set of the eigenvalues
  of $\mK$, we mean the set of $q\in P_K$ such that there is a sequence $t_k\downarrow 0$ (where $t=0$ represents the singularity; i.e.,
  $\theta(t)\rightarrow\infty$ as $t\downarrow 0$) with the property that $[p_1(t_k),p_2(t_k),p_3(t_k)]$ converges to $q$. Here $p_i(t)$, $i=1,2,3$,
  are the eigenvalues of $\mK(t)$. However, we do not impose a specific order on the $p_i$'s, so that the $\a$-limit set is invariant under permutations.  
\end{remark}
\begin{proof}
  The statements concerning convergence follow from Propositions~\ref{prop:matter dom data on sing}, \ref{prop:ap not plus},
  \ref{prop:I and II not iso as}, \ref{prop:VIz and VIIz not iso as} and \ref{prop:LRS VIII and IX}. The statement concerning oscillations follows
  from Proposition~\ref{cor:BianchiVIIIvacuumas}. 
\end{proof}
Combining this observation with Proposition~\ref{prop:dev to data on sing nIX} yields the following conclusion.
\begin{cor}
  Let $\mfT$ be a Bianchi class A type, $\mfs\in\{\iso,\roLRS,\roper,\rogen\}$ and $V\in \mfP_{\a_V}^\infty$ be non-negative, where $\a_V\in (0,1)$ in case
  of Bianchi type I and non-LRS Bianchi type II; and $\a_V\in (0,1/3)$ otherwise. Assume that $(\mfT,\mfs)\neq(\mrI,\iso)$ and $\mfT\neq\mrIX$.
  Then, if $(\mfT,\mfs)\neq (\mrVIII,\rogen)$, ${}^{\rosc}_{\roqu}\mfD_{\mfT}^{\mfs}[V]={}^{\rosc}\mfD_{\mfT}^{\mfs}[V]$ and
  the map $\mfR_{\mfT,\mfs}$ introduced in the statement of Proposition~\ref{prop:dev to data on sing nIX} defines a diffeomorphism from the set of all
  isometry classes of developments of simply connected initial data to the set of all isometry classes of simply connected initial data on the
  singularity. Moreover, ${}^{\rosc}_{\roqu}\mfD_{\mrVIII}^{\rogen}[V]$ coincides with the isometry classes of matter dominated developments.  
\end{cor}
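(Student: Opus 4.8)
The plan is to obtain the corollary by assembling Theorem~\ref{thm:dev inducing data on the sing} and Proposition~\ref{prop:dev to data on sing nIX}; no new estimate is needed, only a matching of hypotheses. Observe first that $V\in\mfP_{\a_V}^\infty\subseteq\mfP_{\a_V}^2\cap\mfP_{\a_V}^1$, so the integrability conditions required by both results hold, and that the ranges of $\a_V$ imposed in the corollary ($\a_V\in(0,1/3)$ for the higher types, $\a_V\in(0,1)$ for Bianchi type I and non-LRS Bianchi type II) are contained in the range $\a_V\in(0,1)$ allowed in Proposition~\ref{prop:dev to data on sing nIX} and Definition~\ref{def:mfR nBIX nIiso}. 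Note also that, in the conventions of this paper, the mean curvature of the initial hypersurface of a Bianchi class A development equals $\tr_{\bge}\bk$ (the Weingarten map is $\bk$ with one index raised by $\bge$, and the frame in Definition~\ref{def:comm and Wein matr} is $\bge$-orthonormal), so an element of ${}^{\rosc}\mfD_{\mfT}^{\mfs}[V]$, which by Definition~\ref{def:mDV} arises from some $\mfI\in{}^{\rosc}\mB_{\mfT}^{\mfs}[V]$ whose mean curvature is strictly positive, arises from data with $\tr_{\bge}\bk>0$, hence in particular with $\tr_{\bge}\bk\geq 0$.

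Assume $(\mfT,\mfs)\neq(\mrVIII,\rogen)$. Given an element of ${}^{\rosc}\mfD_{\mfT}^{\mfs}[V]$, represent it as $\md[V](\mfI)$ with $\mfI\in{}^{\rosc}\mB_{\mfT}^{\mfs}[V]$, $\tr_{\bge}\bk>0$ as above. Since $\mfT\neq\mrIX$, the extra hypothesis concerning $\mB_{\mrIX,\ropp}$ in Theorem~\ref{thm:dev inducing data on the sing} is vacuous, so the theorem applies to $\mfI$. Because $(\mfT,\mfs)\neq(\mrI,\iso)$, $\mfT\neq\mrIX$ and $(\mfT,\mfs)\neq(\mrVIII,\rogen)$, the exceptional alternative ``vacuum dominated, $\mfs=\rogen$ and $\mfT\in\{\mrVIII,\mrIX\}$'' is impossible, so $\md[V](\mfI)$ induces data on the singularity. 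Thus ${}^{\rosc}\mfD_{\mfT}^{\mfs}[V]\subseteq{}^{\rosc}_{\roqu}\mfD_{\mfT}^{\mfs}[V]$; the reverse inclusion holds by definition, so the two sets coincide.

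Feeding this equality into Proposition~\ref{prop:dev to data on sing nIX}, which asserts that $\mfR_{\mfT,\mfs}\colon {}^{\rosc}_{\roqu}\mfD_{\mfT}^{\mfs}[V]\to{}^{\rosc}\mfS_{\mfT}^{\mfs}$ is a diffeomorphism with range all of ${}^{\rosc}\mfS_{\mfT}^{\mfs}$ (Remark~\ref{remark:map to sing diffeo}), we conclude that $\mfR_{\mfT,\mfs}$ is a diffeomorphism from ${}^{\rosc}\mfD_{\mfT}^{\mfs}[V]$ --- which by Definition~\ref{def:mDV} is precisely the set of all isometry classes of simply connected developments of Bianchi type $\mfT$ and symmetry type $\mfs$ (with strictly positive mean curvature) corresponding to $V$ --- onto ${}^{\rosc}\mfS_{\mfT}^{\mfs}$, the set of all isometry classes of simply connected initial data on the singularity of that type. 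This establishes the first assertion.

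For the Bianchi type VIII, generic case, apply Theorem~\ref{thm:dev inducing data on the sing} with $(\mfT,\mfs)=(\mrVIII,\rogen)$; the hypothesis $\tr_{\bge}\bk\geq 0$ again holds for every development in ${}^{\rosc}\mfD_{\mrVIII}^{\rogen}[V]$. The theorem says such a development induces data on the singularity unless it is vacuum dominated, and, in the vacuum dominated case, that the expansion normalised Weingarten map $\mK$ does not converge; since convergence of $\mK$ is part of Definition~\ref{def:ind data on sing}, a vacuum dominated development does not induce data on the singularity. By the dichotomy of Proposition~\ref{prop:dichotomy} and Definition~\ref{def:matter and vacuum dominated}, such a development is either matter dominated or vacuum dominated, so it induces data on the singularity if and only if it is matter dominated. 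Since ${}^{\rosc}_{\roqu}\mfD_{\mrVIII}^{\rogen}[V]$ is by construction the set of isometry classes of developments (with positive mean curvature) that induce data on the singularity, it coincides with the set of isometry classes of matter dominated developments. The argument is otherwise routine; the only point requiring attention is the bookkeeping of sign and regularity hypotheses --- namely that ``strictly positive mean curvature'' in Definition~\ref{def:mDV} delivers the condition $\tr_{\bge}\bk\geq 0$ needed in Theorem~\ref{thm:dev inducing data on the sing}, and that the range identification in Remark~\ref{remark:map to sing diffeo} is what lets one read the target of $\mfR_{\mfT,\mfs}$ as all of ${}^{\rosc}\mfS_{\mfT}^{\mfs}$.
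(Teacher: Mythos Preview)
Your proposal is correct and takes essentially the same approach as the paper: the corollary is obtained by combining Theorem~\ref{thm:dev inducing data on the sing} with Proposition~\ref{prop:dev to data on sing nIX}, and your bookkeeping of the hypotheses (the $\a_V$ ranges, the positivity of $\tr_{\bge}\bk$ coming from Definition~\ref{def:mDV}, and the use of the dichotomy in the Bianchi~VIII generic case) is exactly what is needed to make the two results fit together.
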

\begin{remark}\label{remark:matter dominated smooth subm}
  Assuming $(\mfT,\mfs)\neq (\mrVIII,\rogen)$, it is clear that the subset of ${}^{\rosc}\mfD_{\mfT}^{\mfs}[V]$ induced by the vacuum dominated
  developments is a smooth
  codimension one submanifold; this is an immediate consequence of the existence of the diffeomorphism and the structure of the set of isometry classes
  of simply connected initial data on the singularity; see Section~\ref{ssection:par data on sing}. Moreover, in case $(\mfT,\mfs)=(\mrVIII,\rogen)$,
  the set of isometry classes of matter dominated developments is an open subset of ${}^{\rosc}\mfD_{\mrVIII}^{\rogen}[V]$. 
\end{remark}
\begin{remark}\label{remark:eigenvalues deg}
  Assuming either $(\mfT,\mfs)\neq (\mrVIII,\rogen)$ or that we are in the matter dominated case with $(\mfT,\mfs)=(\mrVIII,\rogen)$, the following
  holds. The subset of ${}^{\rosc}_{\roqu}\mfD_{\mfT}^{\mfs}[V]$ inducing data on the singularity with $\msK=\roId/3$ is a smooth codimension $2$
  submanifold if $\mfs\notin \{\iso,\roLRS\}$ and a smooth codimension $1$ submanifold if $\mfs=\roLRS$. Moreover, the subset inducing data such that
  two of the eigenvalues of $\msK$ coincide and differ from the third is a smooth codimension $1$ submanifold if $\mfs\notin \{\iso,\roLRS\}$. Again,
  the argument is similar to the one presented in the case of Remark~\ref{remark:matter dominated smooth subm}. 
\end{remark}
\begin{remark}\label{remark:horizon case}
  Consider initial data on the singularity with the property that $1$ is an eigenvalue of $\msK$. Assuming the initial data on the singularity
  to correspond to a development, they have to be LRS; see Theorem~\ref{thm:dataonsingtosolution}. Moreover, the subset of
  ${}^{\rosc}\mfD_{\mfT}^{\roLRS}[V]$ inducing isometry classes of such data on the singularity is a smooth codimension $1$ submanifold. 
\end{remark}
Next, we turn to Bianchi type IX. Combining Proposition~\ref{prop:dev to dos BIX} with Theorem~\ref{thm:dev inducing data on the sing} yields the
following corollary.
\begin{cor}
  Let $\mfs\in\{\iso,\roLRS,\rogen\}$ and $V\in \mfP_{\a_V}^\infty$ be non-negative, where $\a_V\in (0,1/3)$. Then, if $\mfs\neq\rogen$,  
  ${}^{\rosc}_{\roqu}\mfD_{\mrIX,\ropp}^{\mfs}[V]={}^{\rosc}\mfD_{\mrIX,\ropp}^{\mfs}[V]$ and the map $\mfR_{\mrIX,\mfs,\ropp}$ introduced in the statement of
  Proposition~\ref{prop:dev to dos BIX} defines a diffeomorphism from the set of all isometry classes of developments of initial
  data in ${}^{\rosc}\mB^{\mfs}_{\mrIX,\ropp}[V]$ to the set of all isometry classes of simply connected initial data on the singularity. Moreover,
  ${}^{\rosc}_{\roqu}\mfD_{\mrIX,\ropp}^{\rogen}[V]$ coincides with the isometry classes of matter dominated developments. 
\end{cor}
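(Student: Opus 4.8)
The plan is to read off the statement by combining Theorem~\ref{thm:dev inducing data on the sing} with Proposition~\ref{prop:dev to dos BIX}; the substantive work has already been done, and what remains is to track domains and to check isometry invariance of the relevant properties. So first fix $\mfs\neq\rogen$ and consider an arbitrary element of ${}^{\rosc}\mfD_{\mrIX,\ropp}^{\mfs}[V]$, i.e.\ the isometry class of a development $\mD[V](\mfI)$ with $\mfI\in{}^{\rosc}\mB_{\mrIX,\ropp}^{\mfs}[V]$; see Definition~\ref{def:mDV}. By the remark following Definition~\ref{def:plus ap pp nd} (which rests on Lemmas~\ref{lemma:BianchiAdevelopment} and \ref{lemma:Bianchi IX remainder}) there is a hypersurface of homogeneity in $\mD[V](\mfI)$ on which the mean curvature is positive; replacing $\mfI$ by the data induced there leaves the development unchanged, so we may assume $\tr_{\bge}\bk>0$, and the new data still lie in ${}^{\rosc}\mB_{\mrIX,\ropp}^{\mfs}[V]$ since pseudo-positivity is a property of the development and the symmetry type is preserved by the flow, see Proposition~\ref{prop:unique max dev}. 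Theorem~\ref{thm:dev inducing data on the sing} now applies, and its exceptional clause (which forces $\mfs=\rogen$) is vacuous; hence $\mD[V](\mfI)$ induces data on the singularity. As the class was arbitrary, this gives the first assertion, ${}^{\rosc}_{\roqu}\mfD_{\mrIX,\ropp}^{\mfs}[V]={}^{\rosc}\mfD_{\mrIX,\ropp}^{\mfs}[V]$.

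Next, the hypotheses imposed here on $V$ (namely $V\in\mfP_{\a_V}^\infty$ with $\a_V\in(0,1/3)$) are exactly those of Proposition~\ref{prop:dev to dos BIX}, which therefore applies: $\mfR_{\mrIX,\mfs,\ropp}$ is defined on ${}^{\rosc}_{\roqu}\mfD_{\mrIX,\ropp}^{\mfs}[V]$ and, because $\mft=\ropp$, it is a diffeomorphism onto ${}^{\rosc}\mfS_{\mrIX}^{\mfs}$, not merely onto its image. By the previous paragraph its domain is ${}^{\rosc}\mfD_{\mrIX,\ropp}^{\mfs}[V]$, which by Definition~\ref{def:mDV} is the set of all isometry classes of developments of data in ${}^{\rosc}\mB_{\mrIX,\ropp}^{\mfs}[V]$; this is precisely the claimed diffeomorphism onto the set of isometry classes of simply connected initial data on the singularity of Bianchi type IX and symmetry type $\mfs$.

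Finally, for $\mfs=\rogen$, let $\mfI\in{}^{\rosc}\mB_{\mrIX,\ropp}^{\rogen}[V]$. By Propositions~\ref{prop:dichotomy} and \ref{prop:dichotomy roap} the development $\mD[V](\mfI)$ is either matter dominated or vacuum dominated (Definition~\ref{def:matter and vacuum dominated}), and Theorem~\ref{thm:dev inducing data on the sing} shows that it induces data on the singularity if and only if it is not vacuum dominated, i.e.\ if and only if it is matter dominated. Since matter domination is a condition on the function $\phi_t/\theta$ of $t$ alone, it is an isometry invariant and descends to isometry classes of developments; hence ${}^{\rosc}_{\roqu}\mfD_{\mrIX,\ropp}^{\rogen}[V]$ is exactly the set of isometry classes of matter dominated developments. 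There is no real obstacle at this stage: the hard asymptotic analysis underlying Theorem~\ref{thm:dev inducing data on the sing} and the regularity and surjectivity statements underlying Proposition~\ref{prop:dev to dos BIX} have already been carried out, and the only care needed is in matching up the fixed-mean-curvature parametrisations used by those two inputs.
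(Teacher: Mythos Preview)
Your proof is correct and follows precisely the route the paper indicates: the paper states the corollary immediately after the sentence ``Combining Proposition~\ref{prop:dev to dos BIX} with Theorem~\ref{thm:dev inducing data on the sing} yields the following corollary'' and gives no further argument, so your elaboration of how these two inputs fit together is exactly what is needed. One tiny imprecision: the text you cite about pseudo-positive data admitting a hypersurface with $\theta>0$ and $\theta_t<0$ appears just \emph{before} Definition~\ref{def:plus ap pp nd}, not after it.
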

\begin{remark}
  Observations similar to Remarks~\ref{remark:matter dominated smooth subm}--\ref{remark:horizon case} hold in this case as well. 
\end{remark}

\section[Asymptotics, $k=0$ FLRW]{Asymptotics in the direction of the singularity, isotropic and spatially flat setting}\label{section:k eq zero case}

The results of the previous two sections illustrate that the Einstein flow defines a diffeomorphism between isometry classes of developments and isometry
classes of data on the singularity in many situations. However, the isotropic Bianchi type I setting is special for several reasons. To begin
with, there are the solutions described in Remark~\ref{remark:dev of trivial id}. However, even if we exclude these solutions by restricting our attention
to developments with a crushing singularity, we have the following obstruction to the existence of a diffeomorphism: Due to
Lemma~\ref{lemma:iso BI dev homeom one fix CMC} and Remark~\ref{remark:incompatible topologies}, there are potentials such that the set of isometry
classes of simply connected isotropic Bianchi type I developments with a crushing singularity have a different topology from the set of isometry classes
of simply connected isotropic Bianchi type I data on the singularity. For this simple reason, there must, for such potentials, be exceptional developments
that have a crushing singularity but do not induce initial data on the singularity; cf. Proposition~\ref{prop:dev to dos iso BI}. This complicates the
analysis. Nevertheless, it is possible to draw conclusions, some of which depend on more detailed assumptions concerning the potential. The state
space of interest is ${}^{\rosc}\mfB_{\mrI}^{\iso}[V]$. We endow this space with a smooth structure in Remark~\ref{remark:sc mfB mfT mfs param}.
However, the corresponding parametrisation arises from a general perspective containing redundant information. It is therefore convenient
to first introduce the following simplified terminology.
\begin{definition}\label{def:M plus}
  Fix a $V\in C^{\infty}(\rn{})$ and a $\vartheta_0\in\ro$. Then $B^{\iso}_{\mrI,+}$ and $B^{\iso}_{\mrI,+}(\vartheta_0)$ are defined by 
  \begin{align*}
    B^{\iso}_{\mrI,+} := & \{(\theta_0,\phi_0,\phi_1)\in \rn{3}\, |\, \theta_0^2=3\phi_1^2/2+3V(\phi_0),\ (\phi_1,V'(\phi_0))\neq (0,0)\},\\
    B^{\iso}_{\mrI,+}(\vartheta_0) := & \{(\phi_0,\phi_1)\in \rn{2}\, |\, \vartheta_0^2=3\phi_1^2/2+3V(\phi_0),\ (\phi_1,V'(\phi_0))\neq (0,0)\}.    
  \end{align*}
  \index{$\a$Aa@Notation!Symmetry reduced sets of regular initial data!BisoI+@$B^{\iso}_{\mrI,+}$}%
  \index{$\a$Aa@Notation!Symmetry reduced sets of regular initial data!BisoI+thz@$B^{\iso}_{\mrI,+}(\vartheta_0)$}%
\end{definition}
\begin{remark}
  The Hamiltonian constraint is $\theta_0^2=3\phi_1^2/2+3V(\phi_0)$ in the isotropic Bianchi type I setting; see
  (\ref{eq:ham con original}). The condition $(\phi_1,V'(\phi_0))\neq (0,0)$ excludes trivial initial data.
\end{remark}
\begin{remark}\label{remark: B iso mrI plus smooth struct}
  The non-triviality condition ensures that $B^{\iso}_{\mrI,+}$ is a smooth $2$-dimensional manifold and that if $B^{\iso}_{\mrI,+}(\vartheta_0)$ is
  non-empty, then it is a smooth $1$-dimensional manifold. These manifolds can be identified (via a diffeomorphism) with
  ${}^{\rosc}\mfB_{\mrI}^{\iso}[V]$ and ${}^{\rosc}\mfB_{\mrI}^{\iso}[V](\vartheta_0)$ respectively; see Lemma~\ref{lemma:sc mfB mfT mfs param} and
  Remarks~\ref{remark:par iso id fixed mc} and \ref{remark:sfR I iso B I plus iso}. 
\end{remark}
\begin{remark}
  Since the conditions $\phi_1=0$ and $V'(\phi_0)=0$ are preserved by the evolution, it is clear that the non-triviality condition is preserved
  by the evolution. Thus a solution corresponding to initial data in $B^{\iso}_{\mrI,+}$ remains in $B^{\iso}_{\mrI,+}$.
\end{remark}
Assume now that $V\in\mfP_{\ropar}$ and that $\vartheta_0>[3v_{\max}(V)]^{1/2}$. Then, due to Lemma~\ref{lemma:iso BI dev homeom one fix CMC} and
Remark~\ref{remark: B iso mrI plus smooth struct}, ${}^{\rosc}\mfD_{\mrI,\roc}^{\iso}[V]$ and $B^{\iso}_{\mrI,+}(\vartheta_0)$ are homeomorphic. Moreover,
due to Lemma~\ref{lemma:top of devel iso BI} and Remark~\ref{remark:sfR I iso B I plus iso}, the following holds:
\[
B^{\iso}_{\mrI,+}(\vartheta_0)\cong\left\{\begin{array}{cl} \ro\sqcup\ro & \mathrm{if}\ V\ \mathrm{bounded},\\
\ro & \mathrm{if}\ \{V(s)|s\leq 0\}\ \mathrm{bounded\ and }\ \{V(s)|s\geq 0\}\ \mathrm{unbounded},\\
\ro & \mathrm{if}\ \{V(s)|s\leq 0\}\ \mathrm{unbounded\ and }\ \{V(s)|s\geq 0\}\ \mathrm{bounded},\\
\sn{1} & \mathrm{if}\ \{V(s)|s\leq 0\}\ \mathrm{unbounded\ and }\ \{V(s)|s\geq 0\}\ \mathrm{unbounded},
\end{array}\right.
\]
where $\cong$ signifies the existence of a diffeomorphism. On the other hand, due to Section~\ref{ssection:par data on sing}, it is clear that
${}^{\rosc}\mfS_{\mrI}^{\iso}$ is diffeomorphic to two copies of $\ro$. Only in the case of bounded potentials can we thus expect a diffeomorphism from
the set of isometry classes of developments with a crushing singularity to the set of isometry classes of data on the singularity. In fact, this
is exactly what happens.
\begin{thm}
  Assume $V\in C^{\infty}(\ro)$ to be bounded and to be such that $V\in\mfP_{\ropar}\cap\mfP_{\a_V}^\infty$ for some $\a_V\in (0,1)$. Then
  ${}^{\rosc}_{\roqu}\mfD_{\mrI,\roc}^{\iso}[V]={}^{\rosc}\mfD_{\mrI,\roc}^{\iso}[V]$
  and the map $\mfR_{\mrI,\iso}$ introduced in Proposition~\ref{prop:dev to dos iso BI} is a diffeomorphism from ${}^{\rosc}\mfD_{\mrI,\roc}^{\iso}[V]$
  to $\mfS_{\mrI}^{\iso}$. 
\end{thm}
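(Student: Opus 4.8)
The plan is to reduce the theorem to the single assertion ${}^{\rosc}_{\roqu}\mfD_{\mrI,\roc}^{\iso}[V]={}^{\rosc}\mfD_{\mrI,\roc}^{\iso}[V]$, i.e.\ that every simply connected isotropic Bianchi type~I development with a crushing singularity, for a potential $V$ as in the hypotheses, induces data on the singularity. Granting this, the diffeomorphism statement is immediate from Proposition~\ref{prop:dev to dos iso BI} (whose hypotheses hold since $\mfP_{\a_V}^\infty\subset\mfP_{\a_V}^2$ and $V\in\mfP_{\ropar}$): that proposition exhibits ${}^{\rosc}_{\roqu}\mfD_{\mrI,\roc}^{\iso}[V]$ as an open subset of ${}^{\rosc}\mfD_{\mrI,\roc}^{\iso}[V]$ on which $\mfR_{\mrI,\iso}$ is a diffeomorphism onto ${}^{\rosc}\mfS_{\mrI}^{\iso}$, so once that open subset is all of ${}^{\rosc}\mfD_{\mrI,\roc}^{\iso}[V]$ nothing remains.

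So fix $\mfI\in{}^{\rosc}\mB_{\mrI}^{\iso}[V]$ with $(M,g,\phi)=\mD[V](\mfI)$ having a crushing singularity. In the isotropic Bianchi type~I setting the development is of FLRW form $g=-dt\otimes dt+a^2\bge_0$ with $\bge_0$ flat on $\rn{3}$ (isotropy is preserved by the flow, by Proposition~\ref{prop:unique max dev}); the existence interval may be normalised to $(0,t_+)$; and, writing $\theta=3a^{-1}a_t$ for the mean curvature, one has $\theta(t)\to\infty$ as $t\downarrow 0$, the Hamiltonian constraint $\theta^2=\tfrac32\phi_t^2+3V(\phi)$, the Klein--Gordon equation $\phi_{tt}+\theta\phi_t+V'(\phi)=0$, and, upon differentiating the constraint, $\theta_t=-\tfrac32\phi_t^2\le0$. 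First I would establish velocity dominance: since $V$ is bounded, $0\le 3V(\phi)/\theta^2\le C\theta^{-2}\to0$, hence $\tfrac32\phi_t^2/\theta^2\to1$ and, by continuity, $\phi_t/\theta$ converges to a limit $\Phi_1\in\{\pm(2/3)^{1/2}\}$. Then $\theta_t/\theta^2=-\tfrac32\phi_t^2/\theta^2\to-1$ with $\theta^{-1}\to0$ as $t\downarrow0$, so integration gives $t\theta(t)\to1$; the same computation gives $a(t)\sim\lambda_0 t^{1/3}$ for some $\lambda_0>0$.

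The key step is then to control the derivatives of $V$ along the solution. From $|\phi_t|\le(2/3)^{1/2}\theta$ and $t\theta(t)\to1$ one gets, for any $\e>0$ and all small $t$, $|\phi(t)|\le(2/3)^{1/2}(1+\e)|\ln\theta(t)|+C$; since $\sqrt6\,(2/3)^{1/2}=2$ and $\a_V<1$, one may choose $\e$ so small that $|V^{(l)}(\phi(t))|\le c_l e^{\sqrt6\a_V|\phi(t)|}\le C_l\,\theta(t)^{2-\de}$ for every $l$ and some fixed $\de>0$. With these bounds the remaining limits follow by integrating total derivatives that are now integrable on $(0,t_0)$: $\mK=K/\theta=\roId/3$ identically, so $\mK\to\msK:=\roId/3$ and $\theta^{-1}\phi_t\to\Phi_1$; since $\theta^{\mK}=\theta^{1/3}\roId$ one has $\bge(\theta^{\mK}v,\theta^{\mK}w)=\theta^{2/3}a^2\,\bge_0(v,w)$ with $\tfrac{d}{dt}\ln(\theta^{2/3}a^2)=\tfrac23(\theta_t/\theta+\theta)=2V(\phi)/\theta=O(t)$, whence $\theta^{2/3}a^2\to\lambda>0$ and $\msH:=\lambda\bge_0$; and, writing $u=\phi_t/\theta$, one finds $\dot u=-V'(\phi)/\theta-3uV(\phi)/\theta$ and $\tfrac{d}{dt}[\phi+u\ln\theta]=3uV(\phi)(1-\ln\theta)/\theta-V'(\phi)\ln\theta/\theta$, which by the above is $O(t|\ln t|)+O(t^{\de-1}|\ln t|)$ and hence integrable near $t=0$, so $\phi+\theta^{-1}\phi_t\ln\theta\to\Phi_0$. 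Finally one checks directly that $(\rn{3},\msH,\msK,\Phi_0,\Phi_1)$ satisfies Definition~\ref{def:ndvacidonbbssh} ($\tr\msK=1$, $\msK$ is $\msH$-symmetric, $\rodiv_\msH\msK=0$, $\tr\msK^2+\Phi_1^2=\tfrac13+\tfrac23=1$, and the eigenvalues of $\msK$ all equal $1/3\in(0,1)$, so conditions (3)--(4) are vacuous) and, as $\msH$ is flat and $\msK=\roId/3$, is isotropic; hence $\mfI_\infty\in\mS_{\mrI}^{\iso}$ and $(M,g,\phi)$ induces $\mfI_\infty$ on the singularity. Since every element of ${}^{\rosc}\mfD_{\mrI,\roc}^{\iso}[V]$ is of this form and, by Lemma~\ref{lemma:dev smo str isoI}, lies in the image of some $\iota_\vartheta$ (a non-trivial isotropic Bianchi~I development is non-degenerate at some constant-$t$ surface, since $\bphi_1\equiv 0$ would force trivial data), this gives ${}^{\rosc}_{\roqu}\mfD_{\mrI,\roc}^{\iso}[V]={}^{\rosc}\mfD_{\mrI,\roc}^{\iso}[V]$.

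I expect the main obstacle to be the passage from "these total derivatives converge pointwise to $0$" to "their integrals converge up to $t=0$", which rests entirely on the bootstrap bound $|V^{(l)}(\phi(t))|=O(\theta^{2-\de})$; this is precisely where the hypotheses $V$ bounded (forcing velocity dominance, $t\theta\to1$, and the logarithmic growth $|\phi(t)|\lesssim|\ln\theta(t)|$) and $\a_V<1$ (via the numerical identity $\sqrt6\,(2/3)^{1/2}=2$) are genuinely used, and dropping either would admit the exceptional non-quiescent solutions signalled by Remark~\ref{remark:incompatible topologies}. Everything else is routine; alternatively the same estimates can be run inside the Wainwright--Hsu formulation of Section~\ref{ssection:whsuform} along the lines of the matter-dominated analysis behind Theorem~\ref{thm:dev inducing data on the sing}, the only real difference being that the isotropic Bianchi type~I case is excluded there and must be treated directly, as above.
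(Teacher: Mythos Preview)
Your proposal is correct and follows the same overall strategy as the paper: reduce to showing that every isotropic Bianchi~I development with a crushing singularity induces data on the singularity, then invoke Proposition~\ref{prop:dev to dos iso BI} for the diffeomorphism statement. The paper's proof is a one-line citation of Corollary~\ref{cor: Bianchi I isotropic V bd}, Lemma~\ref{lemma: Bianchi I isotropic}, and Remark~\ref{remark:data induced on sing BI iso}; you have essentially re-derived the content of those results inline.

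The one substantive difference in execution is this. The paper, via Lemma~\ref{lemma: Bianchi I isotropic}, works in $\tau$-time and reads the convergence of $\phi_\tau$ directly from the Hamiltonian constraint: once $V\circ\phi/\theta^2\to0$ exponentially (which follows from $V$ bounded and $\theta\to\infty$), the constraint gives $|\phi_\tau^2-6|\le Ce^{6(1-\a_V)\tau}$, and integrating yields the limit $\Phi_0$. In particular the cited lemmas need only $V\in\mfP_{\a_V}^0$, which is automatic for bounded $V$; this is why the paper can remark that boundedness alone suffices for the ``induces data'' part. Your argument instead works in $t$-time and controls $\tfrac{d}{dt}[\phi+u\ln\theta]$ directly; this forces you to bound $V'(\phi)$, which is where you use the $\mfP_{\a_V}^\infty$ hypothesis (legitimately, since it is assumed). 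Both routes are sound; the paper's is slightly more economical in that it avoids derivative bounds on $V$ for this step, but yours is entirely self-contained and makes the mechanism transparent.
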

\begin{remark}
  In order to prove that isotropic Bianchi type I developments with a crushing singularity induce data on the singularity, it is sufficient to assume
  that $0\leq V\in C^\infty(\ro)$ and that $V$ is bounded. The additional requirements on $V$ are there to ensure that $\mfR_{\mrI,\iso}$ is a diffeomorphism. 
\end{remark}
\begin{proof}
  The statement is an immediate consequence of Proposition~\ref{prop:dev to dos iso BI}, Corollary~\ref{cor: Bianchi I isotropic V bd},
  Lemma~\ref{lemma: Bianchi I isotropic} and Remark~\ref{remark:data induced on sing BI iso}.
\end{proof}
Turning, next, to the case that $V$ is bounded for $s\leq 0$ and unbounded for $s\geq 0$ (or vice versa), it is natural to first consider the
case of an exponential potential: $V(s)=V_0e^{\lambda s}$ for some constants $V_0,\lambda>0$. In this case, the topology of the set of isometry classes
of developments with a crushing singularity is $\ro$. A natural way to modify this set in order to obtain two copies of $\ro$ (the topology of the set
of isometry classes of data on the singularity) is to remove one point. In other words, the naive conjecture would be that there should be one
exceptional isometry class of developments with a crushing singularity that does not induce initial data on the singularity, and that the remaining
isometry classes do induce data on the singularity. This is what happens.
\begin{prop}\label{prop:exp pot asympt}
  Let $0<V_0\in\rn{}$ and $0<\lambda<\sqrt{6}$. Define $V(s):=V_0e^{\lambda s}$. Let $\theta\in C^{\infty}(J,(0,\infty))$ and $\phi\in C^{\infty}(J,\rn{})$
  be the mean curvature and the scalar field of a development corresponding to initial data in $B^{\iso}_{\mrI,+}$, where $J=(t_-,t_+)$ is the maximal
  existence interval. Then $t_->-\infty$, $t_+=\infty$ and $\theta(t)\rightarrow\infty$ as $t\downarrow t_-$. Moreover, either the solution induces
  initial data on the singularity; or it does not induce initial data on the singularity and it is the unique solution satisfying
  $3\phi_t(t)/\theta(t)=-\lambda$ for all $t\in J$. 
  In particular, removing the unique solution satisfying $3\phi_t(t)/\theta(t)=-\lambda$ for all $t$ from
  ${}^{\rosc}\mfD_{\mrI,\roc}^{\iso}[V]$ yields ${}^{\rosc}_{\roqu}\mfD_{\mrI,\roc}^{\iso}[V]$ (which is diffeomorphic to ${}^{\rosc}\mfS_\mrI^{\iso}$ via
  the map $\mfR_{\mrI,\iso}$ introduced in Proposition~\ref{prop:dev to dos iso BI}).
\end{prop}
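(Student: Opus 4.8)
The plan is to reduce the isotropic Bianchi type~$\mrI$ equations to a single autonomous ODE on $(-\sqrt{2/3},\sqrt{2/3})$, read the dichotomy off its phase line, and match the two cases against Definition~\ref{def:ind data on sing}. In the isotropic Bianchi type~$\mrI$ setting $g=-dt\otimes dt+a^{2}\bge_{0}$ with $\bge_{0}$ flat, $\mK\equiv\roId/3$, and the equations reduce to $\phi_{tt}+\theta\phi_{t}+V'(\phi)=0$ together with $\theta^{2}=\tfrac32\phi_{t}^{2}+3V(\phi)$; differentiating the constraint and using the field equation gives $\theta_{t}=-\theta^{2}+3V(\phi)$, whence $\theta_{t}\le 0$, $\theta$ never vanishes (as $\theta^{2}\ge 3V>0$), so after a time reversal $\theta>0$, and $\tfrac{d}{dt}(\theta^{-1})=\tfrac32u^{2}$ where $u:=\phi_{t}/\theta\in(-\sqrt{2/3},\sqrt{2/3})$ (strict, since $V>0$). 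Passing to the logarithmic time $\tau$ with $d\tau/dt=\theta$, a direct computation using $V'=\lambda V$ yields the closed equation
\[
\frac{du}{d\tau}=\tfrac32\Bigl(u^{2}-\tfrac23\Bigr)\Bigl(u+\tfrac\lambda3\Bigr).
\]
Its equilibria in $[-\sqrt{2/3},\sqrt{2/3}]$ are $u=\pm\sqrt{2/3}$ and $u=u_{*}:=-\lambda/3$, and $u_{*}$ lies strictly inside the interval precisely because $0<\lambda<\sqrt6$; note $u\equiv u_{*}$ is exactly the condition $3\phi_{t}/\theta\equiv-\lambda$.

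\textbf{Phase line and singularity structure.} Since $u^{2}-\tfrac23<0$ on the open interval, $du/d\tau$ is of sign opposite to $u+\lambda/3$; hence $u$ is either $\equiv u_{*}$ or strictly monotone, and in the monotone case, running $\tau$ backwards (towards the singularity), $u$ increases to $\sqrt{2/3}$ if $u>u_{*}$ and decreases to $-\sqrt{2/3}$ if $u<u_{*}$ --- it cannot approach $u_{*}$, since near $u_{*}$ the flow moves away from $u_{*}$ as $\tau$ decreases. In particular $u$ tends to a nonzero limit $u_{\infty}\in\{u_{*},\pm\sqrt{2/3}\}$ at the singularity. Feeding this back into $\tfrac{d}{dt}(\theta^{-1})=\tfrac32u^{2}\in(0,1)$, together with the maximality dichotomy of Proposition~\ref{prop:unique max dev} and Remark~\ref{remark:improved dichotomy} (recall $V\ge 0$), yields $t_{-}>-\infty$, $\theta\to\infty$ as $t\downarrow t_{-}$ (so the singularity is crushing), and --- since $\theta$ is positive and decreasing, hence bounded, on $[t_{0},t_{+})$ --- $t_{+}=\infty$; one also gets $\theta\asymp(t-t_{-})^{-1}$ near $t_{-}$, so that $\tau\to-\infty$ there.

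\textbf{Matching against data on the singularity.} For the exceptional orbit $u\equiv u_{*}$: if it induced data $(G,\msH,\msK,\Phi_{0},\Phi_{1})$ then $\msK=\lim\mK=\roId/3$, so by Definition~\ref{def:ndvacidonbbssh} the data would be isotropic with $\Phi_{1}=\pm\sqrt{2/3}$; but $\Phi_{1}=\lim u=-\lambda/3\notin\{\pm\sqrt{2/3}\}$ since $0<\lambda<\sqrt6$ --- a contradiction. So this development does not induce data on the singularity even though it has a crushing singularity (explicitly $\theta=6/(\lambda^{2}(t-t_{-}))$), and it is the only such isometry class: at any fixed mean curvature $\vartheta_{0}>0$ the equations $3\phi_{1}=-\lambda\vartheta_{0}$ and $\vartheta_{0}^{2}=\tfrac32\phi_{1}^{2}+3V(\phi_{0})$ (note $1-\lambda^{2}/6>0$) determine $(\phi_{0},\phi_{1})$ uniquely, and $\iota_{\vartheta_{0}}$ is injective by Lemma~\ref{lemma:dev smo str isoI}. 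For all other developments $u\to\pm\sqrt{2/3}$, and I would verify Definition~\ref{def:ind data on sing} as follows: $\mK\equiv\roId/3\to\roId/3=:\msK$ and $\theta^{-1}\phi_{t}=u\to\pm\sqrt{2/3}=:\Phi_{1}$ are immediate, with $\tr\msK^{2}+\Phi_{1}^{2}=\tfrac13+\tfrac23=1$; it remains to prove convergence of $\phi+\theta^{-1}\phi_{t}\ln\theta$ and, since $\bge=a^{2}\bge_{0}$ and $\mK=\roId/3$, of $\theta^{2/3}a^{2}$ in (\ref{eq:bAmetriclimit}). Using $d(\ln\theta)/d\tau=-\tfrac32u^{2}$ and $d(\ln a)/d\tau=\tfrac13$ gives $\tfrac{d}{d\tau}\ln(\theta^{2/3}a^{2})=\tfrac23-u^{2}$ and $\tfrac{d}{d\tau}(\phi+u\ln\theta)=u(1-\tfrac32u^{2})+(du/d\tau)\ln\theta$. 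Linearising the scalar ODE at $u=\pm\sqrt{2/3}$ shows $\tfrac23-u^{2}$ and $1-\tfrac32u^{2}$ decay exponentially in $|\tau|$ as $\tau\to-\infty$ (the relevant rate $3\sqrt{2/3}(\sqrt{2/3}\pm\lambda/3)$ is positive for $0<\lambda<\sqrt6$), while $\ln\theta$ grows only linearly in $|\tau|$; hence both right-hand sides are integrable near $-\infty$, so $\theta^{2/3}a^{2}$ converges to a positive limit and $\phi+u\ln\theta$ to some $\Phi_{0}\in\ro$. The limit data $\mfI_{\infty}=(G,(\lim\theta^{2/3}a^{2})\bge_{0},\roId/3,\Phi_{0},\pm\sqrt{2/3})$ are flat and manifestly satisfy all conditions of Definition~\ref{def:ndvacidonbbssh}, hence lie in $\mS_{\mrI}^{\iso}$, and the development induces them on the singularity.

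\textbf{Conclusion and main obstacle.} Every development from $B^{\iso}_{\mrI,+}$ has a crushing singularity (by the above; the non-triviality clause in Definition~\ref{def:M plus} is automatic here since $V'=\lambda V$ never vanishes), so ${}^{\rosc}\mfD_{\mrI,\roc}^{\iso}[V]$ consists of all their isometry classes, and the preceding step identifies the subset inducing data on the singularity as the complement of the single isometry class $u\equiv u_{*}$, i.e.\ the unique solution with $3\phi_{t}/\theta\equiv-\lambda$. Since $V(s)=V_{0}e^{\lambda s}$ with $0<\lambda<\sqrt6$ lies in $\mfP_{\ropar}\cap\mfP_{\a_{V}}^{\infty}$ for any $\a_{V}\in[\lambda/\sqrt6,1)$ and satisfies $v_{\max}(V)=0$, Proposition~\ref{prop:dev to dos iso BI} applies and exhibits $\mfR_{\mrI,\iso}$ as a diffeomorphism from ${}^{\rosc}_{\roqu}\mfD_{\mrI,\roc}^{\iso}[V]$ onto ${}^{\rosc}\mfS_{\mrI}^{\iso}$; together with the identification just made, this is the assertion of the proposition. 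The hard part will be the quantitative part of the third step --- extracting the exponential decay of $\tfrac23-u^{2}$ from the linearised phase line and checking that it overcomes the logarithmic weight $\ln\theta$ appearing in Definition~\ref{def:ind data on sing}; everything else is soft one-dimensional dynamics together with the already-established Proposition~\ref{prop:dev to dos iso BI}.
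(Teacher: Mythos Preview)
Your proof is correct and follows essentially the same route as the paper. The paper's argument assembles Lemma~\ref{lemma:BianchiAdevelopment}, Proposition~\ref{eq:gen isotropic Bianchi I exp pot}, Lemma~\ref{lemma: Bianchi I isotropic} and Proposition~\ref{prop:dev to dos iso BI}; the middle one derives the same cubic ODE (in the paper's convention $d\tau/dt=\theta/3$, so with $\phi_\tau=3u$ it reads $\phi_\tau'=\tfrac12(\phi_\tau-\sqrt6)(\phi_\tau+\sqrt6)(\phi_\tau+\lambda)$) and carries out the identical phase-line analysis, while the other lemmas cover the existence interval and the verification that the non-exceptional orbits induce data on the singularity --- precisely the computations you do inline in your second and third paragraphs.
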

\begin{proof}
  The statement follows from Proposition~\ref{prop:dev to dos iso BI}, Lemma~\ref{lemma:BianchiAdevelopment},
  Proposition~\ref{eq:gen isotropic Bianchi I exp pot},
  Lemma~\ref{lemma: Bianchi I isotropic}, Remark~\ref{remark:data induced on sing BI iso} and the following argument. The unique solution
  satisfying $3\phi_t(t)/\theta(t)=-\lambda$ for all $t$ has the property that $\phi_t/\theta$ converges to $\Phi_1:=-\lambda/3$ in the direction
  of the singularity. If this solution induced data on the singularity, we would, on the other hand, have $\Phi_1^2=2/3$; this follows from
  Definition~\ref{def:ndvacidonbbssh} and the fact that $\msK=\roId/3$ in the isotropic setting. Combining these observations yields
  $\lambda^2=6$, contradicting the fact that $\lambda\in (0,\sqrt{6})$. 
\end{proof}
Even though this result is of interest, the relevant class of potentials is very restricted. It would be of interest to prove a similar result for
a more general class of potentials. We, next, do so in a setting in which the potential is unbounded in both directions. In this case, the topology
of the set of isometry classes of developments with a crushing singularity is $\sn{1}$. A natural way to modify this set in order to obtain two copies
of $\ro$ (the topology of the set of isometry classes of data on the singularity) is to remove two points. In other words, the naive conjecture would
be that there should be two exceptional isometry classes of developments with a crushing singularity that do not induce initial data on the singularity,
and that the remaining isometry classes do induce data on the singularity. This is what happens. More specifically, the following result holds. 
\begin{thm}\label{thm:asympt as exp pot intro}
  Assume $0\leq V\in C^{\infty}(\rn{})$ and that there are constants $C_V$ and $M$ such that $V(s)>0$ and 
  \begin{equation}\label{eq:ln V biss est intro}
    \left|\left(\ln V\right)''(s)\right|\leq C_V\ldr{s}^{-2}
  \end{equation}
  for all $|s|\geq M$. This means that $(\ln V)'(s)$ converges to limits as $s\rightarrow\pm \infty$. Call the limits $\lambda_\pm$ and assume
  that $-\sqrt{6}<\lambda_-<0$ and that $0<\lambda_+<\sqrt{6}$. Let $\theta\in C^{\infty}(J,(0,\infty))$ and $\phi\in C^{\infty}(J,\rn{})$ be the
  mean curvature and the scalar field of a development corresponding to initial data in $B^{\iso}_{\mrI,+}$, where $J=(t_-,t_+)$ is the maximal
  existence interval. Assuming that $\theta$ is unbounded, there are the following, mutually exclusive, cases:
  \begin{enumerate}[(i)]
  \item The solution is such that
    \begin{equation}\label{eq:phi exc limit intro}
      \lim_{t\rightarrow t_-}\left[3\phi_t(t)/\theta(t)+(\ln V)'[\phi(t)]\right]=0
    \end{equation}
    holds and $\phi(t)\rightarrow\infty$ as $t\rightarrow t_-$. Up to time translation, there is exactly one such solution, and its image is a smooth
    submanifold of $B^{\iso}_{\mrI,+}$.
  \item The solution is such that (\ref{eq:phi exc limit intro}) holds and $\phi(t)\rightarrow-\infty$ as $t\rightarrow t_-$. Up to
    time translation, there is exactly one such solution, and its image is a smooth submanifold of $B^{\iso}_{\mrI,+}$.
  \item The solution has a crushing singularity and induces data on the singularity. 
  \end{enumerate}
  Moreover, assuming, in addition, $V\in\mfP_{\ropar}\cap\mfP_{\a_V}^\infty$ for some $\a_V\in (0,1)$
  and removing the two unique solutions mentioned in (i) and (ii) from the set of isometry classes ${}^{\rosc}\mfD_{\mrI,\roc}^{\iso}[V]$ yields
  ${}^{\rosc}_{\roqu}\mfD_{\mrI,\roc}^{\iso}[V]$ (which is diffeomorphic to ${}^{\rosc}\mfS_\mrI^{\iso}$ via the map $\mfR_{\mrI,\iso}$ introduced in
  Proposition~\ref{prop:dev to dos iso BI}).
\end{thm}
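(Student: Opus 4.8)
The plan is to reduce the evolution to a planar dynamical system and read off every possible behaviour towards the singularity. In the isotropic Bianchi type~I setting $g=-dt\otimes dt+a^2\bge_0$ with $\bge_0$ flat, and with $\theta=3\dot a/a$ the Einstein non-linear scalar field equations become $\theta_t=-\tfrac32\phi_t^2$, $\phi_{tt}=-\theta\phi_t-V'(\phi)$, and $\theta^2=\tfrac32\phi_t^2+3V(\phi)$. Since $\theta_t\le0$, the hypothesis that $\theta$ is unbounded forces $\theta(t)\to\infty$ as $t\downarrow t_-$. Set $\Sigma:=\sqrt{3/2}\,\phi_t/\theta$ and $\Omega_V:=3V(\phi)/\theta^2$, so that $\Sigma^2+\Omega_V=1$ and hence $\Sigma\in[-1,1]$ (here $0\le V$ is used), and introduce the time $\tau$ with $d\tau=-\theta\,dt$, so $\tau\uparrow\infty$ as $t\downarrow t_-$. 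A direct computation gives
\begin{equation*}
  \tfrac{d\Sigma}{d\tau}=(1-\Sigma^2)\big(\Sigma+\tfrac{1}{\sqrt6}(\ln V)'(\phi)\big),\qquad
  \tfrac{d\phi}{d\tau}=-\sqrt{\tfrac23}\,\Sigma,\qquad
  \tfrac{d\ln\theta}{d\tau}=\Sigma^2.
\end{equation*}
Writing $G:=\Sigma+\tfrac1{\sqrt6}(\ln V)'(\phi)$ one has $\Sigma'=\Omega_VG$, $(\ln\Omega_V)'=-2\Sigma G$ and $3\phi_t/\theta+(\ln V)'(\phi)=\sqrt6\,G$; thus (\ref{eq:phi exc limit intro}) says exactly that $G\to0$. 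From $\tfrac{d\ln\theta}{d\tau}=\Sigma^2$ and the fact (established below) that $\Sigma$ is eventually bounded away from $0$, $\theta$ grows at least exponentially in $\tau$, whence $\int^{t_0}\theta\,dt<\infty$, i.e. $t_->-\infty$ and the singularity is crushing. Finally, isotropy forces $\mK=\roId/3$ identically, so by Definition~\ref{def:ndvacidonbbssh} any induced data have $\Phi_1^2=2/3$, i.e. $\theta^{-1}\phi_t=\sqrt{2/3}\,\Sigma\to\pm\sqrt{2/3}$; hence a development with $\theta$ unbounded induces data on the singularity if and only if $\Sigma\to\pm1$, equivalently $\Omega_V\to0$, the remaining convergence producing $\Phi_0$ then following from the rate $\Omega_V\to0$ by Lemma~\ref{lemma: Bianchi I isotropic} and Remark~\ref{remark:data induced on sing BI iso}.

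I would next analyse the planar system. A solution with $\theta$ unbounded cannot have $\phi$ bounded: by the constraint $\theta^2=\tfrac32\phi_t^2+3V(\phi)$, $\theta\to\infty$ with $V(\phi)$ bounded forces $|\phi_t|\to\infty$, contradicting boundedness of $\phi$. A monotonicity argument for $\Sigma$ (which, wherever $(\ln V)'(\phi)\ne0$, crosses $0$ transversally with $\Sigma'$ of fixed sign) excludes $\phi$ oscillating without limit; hence $\phi\to+\infty$ or $\phi\to-\infty$. Suppose $\phi\to+\infty$; then by (\ref{eq:ln V biss est intro}) the system is asymptotically autonomous with limit $\Sigma'=(1-\Sigma^2)(\Sigma+\lambda_+/\sqrt6)$, whose rest points in $[-1,1]$ are $1$, $-1$ and $\Sigma_+^{\ast}:=-\lambda_+/\sqrt6\in(-1,0)$; since limit solutions are monotone between rest points, $\Sigma$ converges, and $\Sigma\to1$ is incompatible with $\phi\to+\infty$ (it would make $\phi$ decreasing). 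So either $\Sigma\to-1$, giving $\Omega_V\to0$ and case~(iii), or $\Sigma\to\Sigma_+^{\ast}$, giving $G\to0$, i.e. (\ref{eq:phi exc limit intro}), i.e. case~(i); note $1-(\Sigma_+^{\ast})^2=1-\lambda_+^2/6>0$, so the latter solution does not induce data. Symmetrically, $\phi\to-\infty$ yields either $\Sigma\to1$, $\Omega_V\to0$, case~(iii), or $\Sigma\to\Sigma_-^{\ast}:=-\lambda_-/\sqrt6$, which lies in $(0,1)$ precisely because $\lambda_-<0$ — exactly the sign condition making $\phi\to-\infty$ consistent with $\phi_\tau=-\sqrt{2/3}\,\Sigma_-^{\ast}<0$ — and gives case~(ii). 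This shows that (i), (ii), (iii) are exhaustive and, since the first two have $\Omega_V\not\to0$ while (iii) has $\Omega_V\to0$, mutually exclusive.

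It then remains to prove existence, uniqueness and smoothness of the exceptional solutions and to draw the counting conclusion. Near $(\Sigma,\phi)=(\Sigma_+^{\ast},+\infty)$ put $\sigma:=\Sigma-\Sigma_+^{\ast}$ and $\delta(\phi):=(\ln V)'(\phi)-\lambda_+$, which by (\ref{eq:ln V biss est intro}) satisfies $|\delta(\phi)|\le C\ldr{\phi}^{-1}$; then $\phi_\tau=\lambda_+/3+O(\sigma)>0$ (so $\phi$ serves as a monotone time) and $\sigma_\tau=A\sigma+\tfrac{A}{\sqrt6}\delta(\phi)+O(\sigma^2+|\sigma\delta|)$ with $A:=1-\lambda_+^2/6>0$. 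Thus the point at infinity is of saddle type, the unique bounded (hence $\to0$) solution being characterised at the linearised level by $\sigma(\tau)=-\tfrac{A}{\sqrt6}\int_\tau^\infty e^{-A(s-\tau)}\delta(\phi(s))\,ds$; a stable-manifold/fixed-point argument, in which the mild non-smoothness at $\phi=\infty$ is absorbed using the decay of $\delta$, then yields exactly one trajectory (hence exactly one solution up to time translation) limiting to it, whose image is a smooth submanifold of $B^{\iso}_{\mrI,+}$. This is case~(i), and $(\Sigma_-^{\ast},-\infty)$ is handled identically for case~(ii). For the last assertion, add $V\in\mfP_{\ropar}\cap\mfP_{\a_V}^\infty$ for some $\a_V\in(0,1)$: by Proposition~\ref{prop:dev to dos iso BI} the map $\mfR_{\mrI,\iso}$ is a diffeomorphism of ${}^{\rosc}_{\roqu}\mfD_{\mrI,\roc}^{\iso}[V]$ onto ${}^{\rosc}\mfS_{\mrI}^{\iso}$, while by the two preceding paragraphs the only isometry classes of developments with a crushing singularity lying outside ${}^{\rosc}_{\roqu}\mfD_{\mrI,\roc}^{\iso}[V]$ are the two exceptional ones; hence ${}^{\rosc}_{\roqu}\mfD_{\mrI,\roc}^{\iso}[V]$ is obtained from ${}^{\rosc}\mfD_{\mrI,\roc}^{\iso}[V]$ by deleting these two points, consistently with the topological count of Remark~\ref{remark:incompatible topologies} ($\sn1$ minus two points $\cong\ro\sqcup\ro\cong{}^{\rosc}\mfS_{\mrI}^{\iso}$).

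The main obstacle is the asymptotically autonomous step of the second paragraph: rigorously excluding oscillation of $\phi$ and showing that a non-exceptional trajectory converges to one of the hyperbolic attractors $\Sigma=\pm1$ rather than shadowing the repelling rest point $\Sigma_\pm^{\ast}$ for all time. This requires quantitative monotonicity estimates for $\Sigma$ and $\Omega_V$ along the flow that exploit the decay of $(\ln V)'(\phi)-\lambda_\pm$ from (\ref{eq:ln V biss est intro}); it is the natural generalisation of the exponential-potential analysis of Proposition~\ref{eq:gen isotropic Bianchi I exp pot}, where $(\ln V)'\equiv\lambda$ makes $\Sigma+\lambda/\sqrt6$ literally a monotone-along-flow quantity and the single exceptional solution is the exact fixed point $\Sigma\equiv-\lambda/\sqrt6$.
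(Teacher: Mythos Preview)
Your dynamical-systems reformulation is correct and genuinely different from the paper's route. The paper proves the theorem by deferring to Theorem~\ref{thm:asympt as exp pot}, whose proof in turn rests on Propositions~\ref{prop:iso theta unbounded} and \ref{prop:phi exc limit} for the trichotomy, and then establishes existence and uniqueness of the exceptional solutions by entirely different means: existence is obtained by first solving the scaffolding ODE $\xi_\tau+(\ln V)'\circ\xi=0$, then taking a sequence of solutions with data matching $\xi$ at times $\tau_n\to-\infty$, proving via a bootstrap that they remain $O(1)$-close to $\xi$, and extracting a convergent subsequence; uniqueness is obtained by comparing two exceptional solutions through a weighted energy $\mfE=\tfrac12\ldr{\tau}^{2\a}(\d_\tau\varphi)^2+\tfrac12\ldr{\tau}^{2\b}\varphi^2$ with the specific choice $(\a,\b)=(15/8,3/4)$, rather than by any stable-manifold theorem. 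Your phase-plane picture (rest points $\pm1$ attracting, $\Sigma_\pm^\ast$ repelling in the autonomous limit) is cleaner conceptually, and the integral representation you write down for $\sigma$ is the right starting point for a fixed-point argument; but because $\delta(\phi)$ decays only like $\ldr{\phi}^{-1}$ rather than exponentially, this is not a standard hyperbolic stable-manifold situation, and you would need either a careful compactification at $\phi=\infty$ or a contraction argument in a space with polynomial weights---essentially recreating, in different variables, the weighted-energy estimate the paper carries out.

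One concrete gap: your exclusion of bounded $\phi$ is misargued. On a finite $t$-interval, $|\phi_t|\to\infty$ does not by itself contradict $\phi$ bounded. The correct argument in your variables is immediate: if $\phi$ were bounded then $V(\phi)$ is bounded, so $\Omega_V=3V(\phi)/\theta^2\to0$, hence $\Sigma^2\to1$, hence $\phi_\tau=-\sqrt{2/3}\,\Sigma$ is eventually bounded away from zero with fixed sign, forcing $|\phi|\to\infty$. This also dispenses with the hand-wavy ``monotonicity argument for $\Sigma$'' you invoke to exclude oscillation. For the step you flag as the main obstacle---that a non-exceptional trajectory actually reaches $\Sigma=\pm1$ rather than shadowing $\Sigma_\pm^\ast$---the paper's Proposition~\ref{prop:phi exc limit} handles this directly by differentiating $G=\phi_\tau+(\ln V)'\circ\phi$ and showing that, once $|G|\geq\e$ for $\tau$ sufficiently negative, $|G|$ can only increase toward the singularity; this forces $\phi_{\tau\tau}$ to have a sign, so $\phi_\tau$ is monotone and must converge to $\pm\sqrt{6}$. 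That argument transfers verbatim to your time orientation and closes the gap you identify.
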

\begin{remark}
  Results related to Proposition~\ref{prop:exp pot asympt} and Theorem~\ref{thm:asympt as exp pot intro}, using somewhat different assumptions,
  are to be found in \cite{foster}.
\end{remark}
\begin{remark}
  There are also solutions that do not have crushing singularities. However, under suitable assumptions on the potential, such solutions are
  non-generic; cf. Proposition~\ref{prop:iso theta bounded} and Remark~\ref{remark:non crushing non generic}. 
\end{remark}
\begin{remark}\label{remark: only one direction V inf intro}
  If we replace the assumptions that $V(s)>0$ and that (\ref{eq:ln V biss est intro}) hold for $|s|\geq M$ with the assumption that these estimates
  hold only for $s\geq M$ and the assumption that $V$ is bounded for $s\leq 0$; and if we remove the assumption that $(\ln V)'(s)\rightarrow\lambda_-$ as
  $s\rightarrow -\infty$, then either \textit{(i)} or \textit{(iii)} have to hold.
\end{remark}
\begin{remark}\label{remark: only one direction V inf intro v2}
  If $\bar{V}(s)=V(-s)$ satisfies the conditions of Remark~\ref{remark: only one direction V inf intro}, then either \textit{(ii)} or \textit{(iii)} have
  to hold.
\end{remark}
\begin{remark}\label{remark:one exceptional soln}
  If $V\in\mfP_{\ropar}\cap\mfP_{\a_V}^\infty$ for some $\a_V\in (0,1)$ and the assumptions of Remark~\ref{remark: only one direction V inf intro} hold, then, removing
  the unique solution as in \textit{(i)} from the set of isometry classes ${}^{\rosc}\mfD_{\mrI,\roc}^{\iso}[V]$ yields ${}^{\rosc}_{\roqu}\mfD_{\mrI,\roc}^{\iso}[V]$
  (which is diffeomorphic to ${}^{\rosc}\mfS_\mrI^{\iso}$ via the map $\mfR_{\mrI,\iso}$ introduced in Proposition~\ref{prop:dev to dos iso BI}). There is an
  analogous statement in case the assumptions of Remark~\ref{remark: only one direction V inf intro} are replaced by the assumptions of
  Remark~\ref{remark: only one direction V inf intro v2}.
\end{remark}
\begin{proof}
  The statement follows from Proposition~\ref{prop:dev to dos iso BI}, Theorem~\ref{thm:asympt as exp pot}, 
  Lemma~\ref{lemma: Bianchi I isotropic}, Remark~\ref{remark:data induced on sing BI iso} and the following argument. The unique solutions
  satisfying (\ref{eq:phi exc limit intro}) are such that $\phi_t/\theta$ converges to $-\lambda_\pm/3$, which is incompatible with
  Definition~\ref{def:ndvacidonbbssh} due to the assumption that $0<\lambda_\pm^2<6$; cf. the proof of Proposition~\ref{prop:exp pot asympt}.
  The statements of Remarks~\ref{remark: only one direction V inf intro}, \ref{remark: only one direction V inf intro v2} and
  \ref{remark:one exceptional soln} follow by, in addition, appealing to Remark~\ref{remark: only one direction V inf}. 
\end{proof}

\section[Asymptotics, $k=-1$ FLRW]{Asymptotics in the direction of the singularity, the isotropic negative spatial
  curvature setting}\label{section:as sing k minus one intro}

Finally, we state results in the direction of the singularity in the isotropic negative spatial curvature setting. We begin by defining the
relevant class of regular initial data.

\begin{definition}\label{def:id k minus one}
  \textit{Locally homogeneous and isotropic negative curvature initial data for the Einstein non-linear scalar field equations},
  \index{Initial data!$k=-1$ FLRW}%
  \index{Initial data!Isotropic}%
  \index{Initial data!Regular!$k=-1$ FLRW}%
  \index{Initial data!Regular!Isotropic}%
  \index{Spatially locally homogeneous and isotropic!Negative curvature!Initial data}%
  with potential $V\in C^{\infty}(\rn{})$, consist of the following: a complete hyperbolic $3$-manifold $(\bM,\bge)$; a covariant
  $2$-tensor field $\bk$ on $\bM$ which is a non-negative constant multiple of $\bge$; and two constants $\bphi_{0}$ and $\bphi_{1}$ satisfying:
  \begin{equation}\label{eq:ham con id k minus one}
    \bS-|\bk|_{\bge}^{2}+(\rotr_{\bge}\bk)^{2} = \bphi_{1}^{2}+2V(\bphi_{0}).
  \end{equation}
  The data are said to be \textit{trivial}
  \index{Initial data!Trivial}%
  \index{Trivial!Initial data}%
  if $\bphi_1=0$ and $V'(\bphi_0)=0$. Let $\mN[V]$
  \index{$\a$Aa@Notation!Sets of regular initial data!mNV@$\mN[V]$}%
  denote the set of all locally homogeneous and isotropic negative curvature initial data for the Einstein non-linear scalar field
  equations with potential $V$. 
\end{definition}
\begin{remark}\label{remark:Milne and Milne Lambda solns}
  In case the initial data are trivial and $V\geq 0$, the corresponding maximal globally hyperbolic development is given by $(M,g,\phi)$, where
  $M:=\bM\times (0,\infty)$,
  \begin{equation}\label{eq:metric Lambda k minus one}
    g:=-dt\otimes dt+a_{\Lambda}^2(t)\bge_-
  \end{equation}
  and $\phi\equiv\bphi_0$. Here $\bge_-$ is the constant multiple of $\bge$ with scalar curvature $-6$, $\Lambda:=V(\bphi_0)$ and
  \begin{equation}\label{eq:a of t expl formula}
    a_\Lambda(t):=\left\{\begin{array}{cl} H^{-1}\sinh(Ht) & \Lambda>0,\\ t & \Lambda=0,\end{array}\right.
  \end{equation}
  where $H:=(\Lambda/3)^{1/2}$. In other words, if $\Lambda=0$, the maximal globally hyperbolic development is the \textit{Milne model}.
  \index{Milne model}%
  In case the initial data are trivial with $V(\bphi_0)>0$, we obtain a vacuum solution to Einstein's
  equations with a cosmological constant $\Lambda$. These statements are justified below the proof of Lemma~\ref{lemma:ex un k minus one}. 
\end{remark}
It is of interest to know if asymptotics of this form appear more generally in the direction of the singularity. This turns out to be the case.
\begin{prop}\label{prop:Milne lambda as k minus one}
  Let $0\leq V\in C^{\infty}(\ro)$. Given $\phi_\infty\in\ro$, there are unique smooth functions $a:(0,\infty)\rightarrow (0,\infty)$ and $\phi:(0,\infty)\rightarrow\ro$
  such that if $(\bM,\bge_-)$ is a complete hyperbolic manifold with scalar curvature $-6$, $M:=\bM\times (0,\infty)$ and $g$ is defined by
  \begin{equation}\label{eq:gSH bge minus}
    g=-dt\otimes dt+a^2(t)\bge_-,
  \end{equation}
  then $(M,g,\phi)$ is a solution to the Einstein non-linear scalar field equations with the property that $\phi(t)\rightarrow\phi_\infty$ and
  $\theta(t)\rightarrow\infty$ as $t\downarrow 0$, where $\theta$ denotes the mean curvature.

  In addition, the unique $(a,\phi)$ is such that
  \begin{equation}\label{eq:phi asymptotics vacuum hyp t ver spec}
    |\phi_t(t)/\theta(t)|+|\phi(t)-\phi_\infty|\leq Ct^2
  \end{equation}
  for all $t\leq 1$ and some constant $C$. Moreover, the metric asymptotes to a solution to the Einstein
  vacuum equations with a cosmological constant $\Lambda:=V(\phi_\infty)$ in the direction of the singularity, see Remark~\ref{remark:Milne and Milne Lambda solns},
  in the sense that there is a constant $C$ such that
  \begin{align}
    |a(t)-a_\Lambda(t)|\leq & Ct^5,\label{eq:a as vacuum setting spec}\\
    \left|\theta(t)-3\tfrac{\dot{a}_\Lambda(t)}{a_\Lambda(t)}\right|\leq & Ct^3\label{eq:theta as vacuum setting k minus one spec}
  \end{align}
  for all $t\leq 1$. 
\end{prop}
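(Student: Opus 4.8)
The plan is to reduce the Einstein non-linear scalar field equations \eqref{seq:ENLSFE} under the ansatz \eqref{eq:gSH bge minus} to an ODE system and to treat the neighbourhood of $t=0$ as a Fuchsian (singular) problem with data prescribed at the singularity. Since $\bge_-$ is hyperbolic of scalar curvature $-6$ (hence its Ricci tensor is $-2\bge_-$ and the spatial leaves have scalar curvature $-6/a^2$), the Einstein tensor of $g$ has only the two independent components encoding the Hamiltonian and Raychaudhuri equations, and $\Box_g\phi=V'(\phi)$ becomes the cosmological scalar field equation; concretely the system consists of the constraint
\begin{equation*}
  \left(\frac{\dot a}{a}\right)^2=\frac{1}{6}\dot\phi^2+\frac{1}{3}V(\phi)+\frac{1}{a^2}
\end{equation*}
together with $\ddot\phi+3(\dot a/a)\dot\phi+V'(\phi)=0$, the Raychaudhuri equation being a differential consequence, so it suffices to evolve the two second order equations for $(a,\phi)$ and impose the constraint at one time, whereupon it propagates. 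Set $\Lambda:=V(\phi_\infty)\geq 0$ and let $a_\Lambda$ be as in \eqref{eq:a of t expl formula}; then $(a_\Lambda,\phi_\infty)$ solves the system precisely when $V'(\phi_\infty)=0$, and in general serves as the background to linearise around. Note that near $t=0$ the potential enters only through its Taylor expansion at $\phi_\infty$, since $\phi$ stays near $\phi_\infty$; the global hypotheses on $V$ are irrelevant there, and only $V\geq 0$ matters, and only for the extension to large $t$.

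First I would carry out the characteristic analysis. Writing $\phi=\phi_\infty+w$, the scalar field equation reads $\ddot w+\theta\dot w=-V'(\phi_\infty+w)$ with $\theta=3\dot a/a\sim 3/t$, whose homogeneous operator has solutions behaving like $1$ and $t^{-2}$; requiring that $\phi$ converge at $t=0$ selects the bounded mode, forcing $w=-\tfrac18 V'(\phi_\infty)\,t^2+o(t^2)$ and $\dot w=O(t)$, hence $\dot\phi/\theta=O(t^2)$. Likewise, putting $u:=\ln(a/a_\Lambda)$ and subtracting the Friedmann equation for $a_\Lambda$ from that for $a$ gives, to leading order near $t=0$, $\dot u+u/t=O(t^3)$ — the right-hand side being $t/2$ times the difference $\tfrac16\dot\phi^2+\tfrac13(V(\phi)-V(\phi_\infty))=O(t^2)$ of the matter energy density from its limit — so the bounded mode yields $u=O(t^4)$, whence $a-a_\Lambda=a_\Lambda\,O(t^4)=O(t^5)$ and $\theta-3\dot a_\Lambda/a_\Lambda=3\dot u=O(t^3)$. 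These are precisely the rates in \eqref{eq:phi asymptotics vacuum hyp t ver spec}, \eqref{eq:a as vacuum setting spec} and \eqref{eq:theta as vacuum setting k minus one spec}.

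To make this rigorous I would recast the coupled system for $(w,\dot w,u,\dot u)$ as a fixed point equation on the Banach space of continuous functions on $(0,T]$ with norm $\sup_{0<t\le T}\big(t^{-2}|w|+t^{-1}|\dot w|+t^{-4}|u|+t^{-3}|\dot u|\big)$, by integrating the two Fuchsian equations against the Green's kernels that project onto the bounded homogeneous mode. All error terms — the Taylor remainder of $V$ at $\phi_\infty$, the quadratic $\dot\phi^2$, the $O(a^2)=O(t^2)$ corrections from the curvature term, and the cross couplings between the $a$- and $\phi$-blocks — are higher order in this norm, so for $T$ small the map is a contraction on a small ball; its fixed point is the desired solution on $(0,T]$, smooth there by standard ODE regularity, and it gives the three displayed estimates on $(0,\min\{T,1\}]$ and hence, by finite-interval continuation, on $(0,1]$. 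Extension to $(0,\infty)$ is then immediate: the constraint forces $(\dot a/a)^2\geq 1/a^2>0$, so $\dot a$ never vanishes and, being positive near $t=0$, is positive throughout, whence $a$ is increasing; and $E:=\tfrac12\dot\phi^2+V(\phi)$ satisfies $\dot E=-\theta\dot\phi^2\leq 0$ since $\theta>0$, so $\dot\phi$ and $V(\phi)$ stay bounded on forward intervals and, $a$ being bounded below, the solution continues for all $t>0$.

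The main obstacle I expect is the \emph{uniqueness} part, namely that every solution of the reduced system with $\phi(t)\to\phi_\infty$ and $\theta(t)\to\infty$ as $t\downarrow 0$ necessarily lies in the weighted space above, after which the contraction property gives uniqueness. Since $\dot a>0$ near $t=0$, $a$ has a limit $a_0\geq 0$; one rules out $a_0>0$ by noting that then the constraint forces $\dot\phi^2\sim\tfrac23\theta^2\to\infty$, and combining $\dot E=-\theta\dot\phi^2$ with the constraint bound $\theta\geq\sqrt{3/2}\,|\dot\phi|$ gives $\dot E\leq -c\,E^{3/2}$ once $E$ is large, so either $E$ blows up at some $t_1>0$ (contradicting existence on $(0,t_0]$) or $E\sim c t^{-2}$ near $0$, which makes $|\dot\phi|\sim ct^{-1}$ and so $\phi\to\pm\infty$, contradicting the hypothesis. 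Hence $a\to 0$; and the same kind of computation shows that the matter-dominated alternative ($\dot\phi^2/\theta^2$ bounded away from $0$) again forces $\phi$ to diverge logarithmically, so $\phi\to\phi_\infty$ finite leaves only the curvature-dominated regime, where $\dot\phi/\theta\to 0$, $V(\phi)/\theta^2\to 0$, $\theta\sim 3/a$ and $a\sim t$; feeding this back into the two Fuchsian equations bootstraps the crude bounds up to the sharp powers $t^2,t,t^4,t^3$, placing the solution in the weighted space. The remaining steps — deriving the precise reduced equations from \eqref{seq:ENLSFE} under the ansatz, and checking that the Green's kernels reproduce exactly the constants in \eqref{eq:phi asymptotics vacuum hyp t ver spec}--\eqref{eq:theta as vacuum setting k minus one spec} — are routine.
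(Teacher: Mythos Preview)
Your approach is correct and, at the structural level, parallels the paper's: both prove existence near $t=0$ by a contraction mapping and then continue globally. There are, however, some genuine differences worth noting.

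\textbf{Time coordinate and variables.} The paper changes to the logarithmic time $\tau$ with $d\tau/dt=\theta/3$ and uses the variables $(\varphi_0,\varphi_1,r)=(\phi-\phi_\infty,\,e^{2\tau}\phi_\tau,\,\ln(1/\theta)-\tau)$, setting up the contraction on a space with weights $e^{\tau/2},e^{3\tau},e^{\tau}$ that are deliberately \emph{weaker} than the eventual decay; the sharp rates in \eqref{eq:phi asymptotics vacuum hyp t ver spec}--\eqref{eq:theta as vacuum setting k minus one spec} are then read off \emph{a posteriori} (via Proposition~\ref{prop:two outcomes as k minus one}). You instead work directly in $t$-time with the natural perturbation variables $(w,u)=(\phi-\phi_\infty,\ln(a/a_\Lambda))$ and build the sharp powers $t^2,t,t^4,t^3$ into the norm from the outset. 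Your choice makes the estimates in the statement fall out immediately; the paper's choice keeps the contraction estimates slightly simpler at the cost of a second pass.

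\textbf{Uniqueness.} Here the two routes diverge more substantially. The paper observes that since $\phi\to\phi_\infty$, the solution only sees $V$ on a compact set, so one may modify $V$ far away to lie in $\mfP_{\a_V}^1$ for some $\a_V\in(0,1/3)$ and then invoke the two--outcomes dichotomy of Proposition~\ref{prop:two outcomes as k minus one}; this immediately gives the sharp asymptotics and places the solution in the contraction space. Your argument is self-contained: you rule out $a\to a_0>0$ and the matter-dominated alternative directly from the constraint and the energy identity $\dot E=-\theta\dot\phi^2$, then bootstrap. This is a legitimate alternative, though your sketch of the $a_0>0$ case via ``either $E$ blows up or $E\sim ct^{-2}$'' is a bit loose; the cleanest way to close it is to note that $\dot\phi^2\sim\tfrac23\theta^2$ combined with Raychaudhuri gives $\dot\theta\sim-\theta^2$, hence $\theta\sim 1/t$ and $a\sim t^{1/3}\to 0$, contradicting $a_0>0$.

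\textbf{One technical point.} You say you ``evolve the two second order equations for $(a,\phi)$'' but then derive the first-order Friedmann equation $\dot u+u/t=O(t^3)$ for $u$. For the fixed-point map you should commit to one of these for the metric variable: either use the first-order constraint for $u$ (so the state is $(w,\dot w,u)$, three-dimensional---this is what the paper effectively does with $r$), or use the second-order Raychaudhuri equation for $u$ and impose the constraint once. Either works, but mixing them as written leaves the role of $\dot u$ in your norm ambiguous.
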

\begin{proof}
  The proof is to be found in Subsection~\ref{ssection:spec as Milne k minus one}. 
\end{proof}
It is of interest to contrast these solutions with the ones inducing data on the singularity. To this end, it is convenient to record the
relevant notion of data on the singularity in the current setting.
\begin{definition}\label{def:ndvacidonbbssh k minus one}
  Let $(\bM,\msH)$ be a complete $3$-dimensional hyperbolic manifold, $\msK$ be the $(1,1)$-tensor field $\msK=\roId/3$ on $\bM$ and
  $(\Phi_{0},\Phi_{1})\in\rn{2}$. Then $(\bM,\msH,\msK,\Phi_{0},\Phi_{1})$ are \textit{locally homogeneous and isotropic negative curvature initial
    data on the singularity for the Einstein non-linear scalar field equations}
  \index{Initial data!$k=-1$ FLRW}%
  \index{Initial data!Isotropic}%
  \index{Initial data!On singularity!$k=-1$ FLRW}%
  \index{Initial data!On singularity!Isotropic}%
  \index{Spatially locally homogeneous and isotropic!Negative curvature initial data}%
  if $\Phi_1^2=2/3$. The set of such data is denoted $\mS_-^{\iso}$.
  \index{$\a$Aa@Notation!Sets of initial data on the singularity!mS-iso@$\mS_-^{\iso}$}%
\end{definition}
Next, we define the notion of a development.
\begin{definition}\label{def:k eq minus one dev}
  Let $V\in C^{\infty}(\ro)$ and $\mfI=(\bM,\bge,\bk,\bphi_0,\bphi_1)\in\mN[V]$. A
  \textit{spatially locally homogeneous and isotropic non-linear scalar field development of}
  \index{Development!$k=-1$ FLRW}%
  \index{Development!Isotropic}%
  \index{Spatially locally homogeneous and isotropic!Negative curvature!Development}%
  $\mfI$ is a triple $(M,g,\phi)$, where $M:=\bM\times J$, $J$ is an open interval, $\phi\in C^{\infty}(J)$,
  \begin{equation}\label{eq:gSH k minus one}
    g=-dt\otimes dt+a^2(t)\bge,
  \end{equation}  
  and $a:J\rightarrow (0,\infty)$ is smooth. Finally, $(M,g,\phi)$ is required to satisfy the Einstein non-linear
  scalar field equations, and for some $t_{0}\in J$, the initial data induced on $\bM_{t_0}:=\bM\times \{t_{0}\}$ by $(M,g,\phi)$ and pulled back by
  $\iota_0$ are required to equal $\mfI$, where $\iota_0(p):=(p,t_0)$.

  In case the left endpoint of $J$ is $t_-$ ($t_-$ will always be assumed to equal $0$ if $t_->-\infty$) and $\theta(t)\rightarrow \infty$ as
  $t\downarrow t_-$ (where $\theta(t)$ denotes the mean curvature of $\bM_t$ with respect to $g$), then the development is said to have a
  \textit{crushing singularity}.
  \index{Crushing singularity}%
\end{definition}
\begin{remark}
  As noted in Section~\ref{section: k negative case}, if $V$ is non-negative and $\mfI\in\mN[V]$, there is a unique spatially locally homogeneous and
  isotropic non-linear scalar field development of $\mfI$ with a crushing singularity and a $J$ which can be assumed to equal $(0,\infty)$.
  We denote this development by $\mD[V](\mfI)$. 
\end{remark}
That developments induce data on the singularity means the following.
\begin{definition}\label{def:ind data on sing k negative}
  Let $V\in C^{\infty}(\ro)$, $\mfI\in\mN[V]$ and $(M,g,\phi)$ be a spatially locally homogeneous and isotropic non-linear scalar field development of
  $\mfI$. Using the terminology of Definition~\ref{def:k eq minus one dev}, assume the development to have a crushing singularity,
  let $K$ be the Weingarten maps induced on $\bM_t$, $t\in J$, and define the \textit{expansion normalised Weingarten map}
  \index{Weingarten map!Expansion normalised}%
  \index{Expansion normalised!Weingarten map}%
  by $\mK:=K/\theta$. If
  there are $\mfI_\infty=(\bM,\msH,\msK,\Phi_{0},\Phi_{1})\in \mS_-^{\iso}$ such that $\mK\rightarrow\msK$, $\theta^{-1}\phi_t\rightarrow\Phi_{1}$ and
  $\phi+\theta^{-1}\phi_t\ln\theta\rightarrow\Phi_{0}$ as $t\downarrow t_-$; and such that the following limit holds for all $v,w\in T\bM$:
  \begin{equation}\label{eq:bAmetriclimit k negative}
    \lim_{t\downarrow t_-}\bge(\theta^{\mK}v,\theta^{\mK}w)=\msH(v,w),
  \end{equation}
  then the development $(M,g,\phi)$ is said to \textit{induce $\mfI_\infty$ on the singularity}.
  \index{Development!Inducing data on singularity}%
  \index{Inducing data on singularity!Development}%
  \index{Initial data!On singularity!From development}%
\end{definition}
\begin{remark}\label{remark:Milne dos}
  The solutions obtained in Proposition~\ref{prop:Milne lambda as k minus one} do not induce data on the singularity in the sense of
  Definition~\ref{def:ind data on sing k negative}, since $\theta^{-1}\phi_t$ converges to zero for these solutions. In addition, and more importantly, the
  expansion normalised spatial scalar curvature of these solutions, $\bS/\theta^2$, converges to $-2/3$. This conclusion should be contrasted
  with the BKL proposal, in which spatial derivative terms are expected to become negligible in the limit. More specifically, the expansion
  normalised spatial scalar curuvature is expected to tend to zero (except for the bounces in the oscillatory setting). Another important feature of
  singularities in the BKL proposal is that
  they are spacelike, unlike the singularities in the solutions mentioned in Remark~\ref{remark:Milne and Milne Lambda solns}. Finally,
  rescaling the metrics induced on the leaves of the foliation by $\bge(\theta^{\mK}v,\theta^{\mK}w)$ does not result in a finite limit.
  On the other hand, the solutions obtained in Proposition~\ref{prop:Milne lambda as k minus one} do have the following properties: $\phi_t/\theta$
  converges (to zero); $\phi+\theta^{-1}\phi_t\ln\theta$ converges to a limit ($\phi_\infty$); $\mK$ converges to a limit ($\roId/3$);
  and $\theta^2\bge$ converges to a limit ($9\bge_-$). In this sense, one can think of $(\bM,9\bge_-,\roId/3,\phi_\infty,0)$ as data on the singularity.
  Moreover, Proposition~\ref{prop:Milne lambda as k minus one} can be thought of as a result stating that there is a unique corresponding
  development. 
\end{remark}
Given data on the singularity, it is of interest to know that there are corresponding developments.
\begin{prop}\label{prop:dos ind dev k minus one}
  Let $V\in \mfP_{\a_V}^2$ for some $\a_V\in (0,1)$ and $\mfI_\infty\in \mS_-^{\iso}$; see Definition~\ref{def:ndvacidonbbssh k minus one}. Then there is a unique
  (up to time translation) development in the sense of Definition~\ref{def:k eq minus one dev} which induces the data $\mfI_\infty$ on the singularity
  in the sense of Definition~\ref{def:ind data on sing k negative}.
\end{prop}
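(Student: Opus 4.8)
The plan is to reduce the equations to a system of ordinary differential equations with a regular singular point at $t=0$, to solve it by a Fuchsian argument about an explicit reference solution, and then to translate back; the whole scheme runs parallel to the proof of Theorem~\ref{thm:dataonsingtosolution} in the isotropic Bianchi type~I case, the only genuinely new feature being the spatial scalar curvature of the hyperbolic factor, which turns out to be asymptotically negligible. Writing a putative development as $g=-dt\otimes dt+a^2(t)\msH$, $\phi=\phi(t)$, the Einstein non-linear scalar field equations reduce to the Hamiltonian constraint $3(\dot a/a)^2=\tfrac12\phi_t^2+V(\phi)-\tfrac12\bS$, where $\bS=\mathrm{scal}(\msH)/a^2<0$, together with the scalar field equation $\phi_{tt}+3(\dot a/a)\phi_t+V'(\phi)=0$ and one second-order Einstein equation; the constraint is propagated by the evolution equations. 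With $\theta=3\dot a/a$ the mean curvature, the expected behaviour at $t\downarrow 0$ is $a\sim t^{1/3}$, $\theta\sim 1/t$, $\phi\sim\Phi_1\ln t+\Phi_0$, the identity $\Phi_1^2=2/3$ being forced by the leading order constraint once one notes that $-\bS/\theta^2=O(t^{4/3})$ and that, along any solution with $|\phi|\lesssim|\ln t|$, the bound $\sum_{l\le 2}|V^{(l)}(\phi)|\lesssim e^{\sqrt{6}\,\a_V|\phi|}\lesssim t^{-2\a_V}$ (using $\sqrt{6}\,\sqrt{2/3}=2$) gives $V(\phi)/\theta^2=O(t^{2(1-\a_V)})\to 0$, precisely because $\a_V<1$. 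The reference solution is the explicit flat, potential-free, isotropic one,
\begin{equation*}
  a_*(t)=t^{1/3},\qquad \phi_*(t)=\Phi_1\ln t+\Phi_0,
\end{equation*}
which solves the flat, potential-free system when $\Phi_1^2=2/3$ and satisfies $\theta_*^{2/3}a_*^2\equiv 1$ --- the normalisation implicit in the metric-limit clause of Definition~\ref{def:ind data on sing k negative} when $\bge$ is identified with $\msH$.

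First I would write the sought solution as $a=t^{1/3}e^{u(t)}$, $\phi=\phi_*(t)+w(t)$ and derive the equations for $(u,w)$. After using the Hamiltonian constraint to express $\dot u$ algebraically in terms of $\dot w$ and lower-order terms, one is left with a single second-order Fuchsian equation for $w$ of the schematic form $\ddot w+\tfrac2t\dot w=t^{\e-2}\,\mathcal{G}(t,w,\dot w)$ with $\e=\min\{2(1-\a_V),4/3\}>0$, the source collecting the spatial curvature (size $t^{4/3}$) and the potential terms (size $t^{2(1-\a_V)}$, controlled by the two-derivative exponential bounds of $\mfP_{\a_V}^2$), together with $\dot u=\tfrac{\Phi_1}{2}\dot w+O(t^{\e-1})$. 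Here $\Phi_1^2=2/3$ is exactly what produces the clean indicial equation (with roots $0$ and $-1$) and makes the curvature and potential genuine perturbations. Since the constant $\Phi_0$ and the conformal scale --- the free parameters sitting at the indicial root $0$ of the $w$-equation and at the constant of integration for $u$ --- are already built into the reference, demanding $u,w\to 0$ (the blowing-up modes $t^{-1}$ being excluded) leaves no remaining freedom, and a standard Fuchsian existence and uniqueness theorem --- the tool underlying Theorem~\ref{thm:dataonsingtosolution} --- provides a unique such solution on some interval $(0,t_0]$. This $(a,\phi)$ extends, by ODE theory together with the global existence result recalled after Definition~\ref{def:k eq minus one dev}, to all of $(0,\infty)$.

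The resulting $(M,g,\phi)$ with $M=\bM\times(0,\infty)$ is a development, in the sense of Definition~\ref{def:k eq minus one dev}, of the data it induces at any $t_0$ --- which lie in $\mN[V]$ since the Hamiltonian constraint holds throughout --- and has a crushing singularity. That it induces $\mfI_\infty$ on the singularity in the sense of Definition~\ref{def:ind data on sing k negative} is then read off from $u,w\to0$: one has $\mK=(\dot a/(a\theta))\roId\equiv\roId/3=\msK$ identically, $\theta^{-1}\phi_t\to\Phi_1$, $\phi+\theta^{-1}\phi_t\ln\theta\to\Phi_0$, and, since $\theta^{\mK}=\theta^{1/3}\roId$, the metric limit reads $\lim_{t\downarrow 0}\theta^{2/3}a^2\msH=\msH$, which holds because $u\to 0$ forces $a\sim t^{1/3}$ with leading coefficient $1$ --- so that once $\mfI_\infty$ is fixed nothing further is free. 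For uniqueness: any development inducing $\mfI_\infty$ produces, in the same variables, a solution of the identical Fuchsian problem with $u,w\to 0$, of which there is exactly one; the residual time-translation freedom is removed by the convention $t_-=0$, and the inessential freedom in which hyperbolic metric on $\bM$ is used in the presentation does not change the spacetime.

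The main obstacle is the bookkeeping behind the Fuchsian reduction. One must choose variables adapted to the Hamiltonian constraint so that the position-type quantities --- the analogue of the $\ln\theta$-shift in Definition~\ref{def:ind data on sing k negative}, whose evolution carries a secular logarithm and whose indicial root is $0$ rather than positive --- are handled correctly, which is why $\Phi_0$ and the conformal scale have to be folded into the reference solution before the remainder can be shown to tend to zero uniquely; and one must verify that the potential terms, controlled only through the two-derivative exponential bounds of $\mfP_{\a_V}^2$ and not through any smoothness of $V$, really enter with a positive power of $t$ and yield a remainder of the regularity the Fuchsian theorem requires. Everything else --- propagation of the constraint, extension to $(0,\infty)$, and the passage back to geometric data on $\bM$ --- is routine or already available earlier in the paper, and relative to Theorem~\ref{thm:dataonsingtosolution} the sole new ingredient, the spatial curvature $\mathrm{scal}(\msH)/a^2$, has just been seen to be of lower order.
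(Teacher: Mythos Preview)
Your proposal is correct and follows essentially the same strategy as the paper: a Fuchsian/Banach fixed-point argument around the explicit Kasner-like reference, with the potential and spatial curvature entering as lower-order perturbations controlled by $\mfP_{\a_V}^2$ and the exponent $\e=\min\{2(1-\a_V),4/3\}$.

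The presentational differences are worth noting. The paper works in $\tau$-time (so $t\sim e^{3\tau}$) and keeps four first-order variables $(\nu,\varkappa,\psi_0,\psi_1)$, where $\varkappa=\ln\theta+3\tau$ and $\nu$ encodes $\ln(\theta a)$; it then feeds the system into the abstract Lemma~\ref{lemma:abs exist and unique} with $\b_V=\min\{4,6(1-\a_V)\}$, which is exactly your $3\e$. You instead work in $t$, eliminate $\dot u$ via the Hamiltonian constraint, and reduce to a single second-order Fuchsian equation for $w$; your indicial analysis $\ddot w+\tfrac{2}{t}\dot w=\cdots$ with roots $0,-1$ is correct (the coefficient $2$ coming precisely from $1+\tfrac{3}{2}\Phi_1^2=2$). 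Your reduction is more economical but requires the branch choice in the constraint; the paper's version keeps $\theta$ as an independent variable and recovers the constraint by the propagation argument $\mH'=2(q-2)\mH$. For uniqueness, the paper is explicit about upgrading mere convergence $x\to 0$ to the weighted decay $e^{-\e\tau}|x|\to 0$ needed for the fixed-point uniqueness; you gloss over this, but it is the routine step you allude to.
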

\begin{remark}\label{remark:as dos k minus one}
  The existence interval of the unique development (cf. Lemma~\ref{lemma:ex un k minus one}), say $J$, can be assumed to take the form $J=(0,t_+)$.
  Moreover, there are constants $a_\infty\in (0,\infty)$ and $C$ and a $t_0\in J$ such that 
  \begin{align*}
    |\theta^{1/3}a-a_\infty|+|t\theta-1| \leq & Ct^{\g_V},\\
    |\theta^{-1}\phi_t-\bPhi_1| \leq & Ct^{\g_V},\\
    |\phi+\theta^{-1}\phi_t\ln\theta-\bPhi_0| \leq & C\ldr{\ln t}t^{\g_V}
  \end{align*}
  for $t\leq t_0$, where $\g_V:=\min\{4/3,2(1-\a_V)\}$ and $(\bPhi_1,\bPhi_0)$ are the data on the singularity for the scalar field. 
\end{remark}
\begin{proof}
  The proof of the proposition and the remark is to be found in Subsection~\ref{ssection:idos k minus one}. 
\end{proof}
At this stage, we know that the outcomes represented by Propositions~\ref{prop:Milne lambda as k minus one} and \ref{prop:dos ind dev k minus one} are
possible. However, it is also of interest to know if other asymptotics can occur. If we make slightly stronger assumptions concerning the potential,
we can conclude that these are the only possible outcomes.

\begin{prop}\label{prop:two outcomes intro k minus one}
  Let $0\leq V\in C^{\infty}(\ro)$ and $\mfI=(\bM,\bge,\bk,\bphi_0,\bphi_1)\in\mN[V]$. Then there are unique $a\in C^\infty(J,(0,\infty))$
  and $\phi\in C^\infty(J)$, where $J=(0,\infty)$, and a unique $t_0\in J$ such that if $g$ is given by
  (\ref{eq:hyperbolic version of metric}), where $\bge_-$ is the constant multiple of $\bge$ with scalar curvature $-6$, and
  $M:=\bM\times J$, then $(M,g,\phi)$ is a solution to the Einstein non-linear scalar field equations such that the initial data induced
  on $\bM_{t_0}:=\bM\times\{t_0\}$ by $(M,g,\phi)$ and pulled back by $\iota_{t_0}$ equal $\mfI$, where $\iota_{t_0}(p):=(p,t_0)$.
  Moreover, this solution has a crushing singularity at $t=0$.

  Assuming, in addition to the above, that $V\in\mfP_{\a_V}^1$ for some $\a_V\in (0,1/3)$, the development either induces data on the singularity
  in the sense of Definition~\ref{def:ind data on sing k negative} or it has asymptotics of the type obtained in
  Proposition~\ref{prop:Milne lambda as k minus one}.
\end{prop}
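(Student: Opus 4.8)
The first assertion — existence and uniqueness of the development $\mD[V](\mfI)$ on $J=(0,\infty)$, with the normalisation $t_0$ and a crushing singularity at $t=0$ — I do not reprove: it is the content of Lemma~\ref{lemma:ex un k minus one} (and the remark following Definition~\ref{def:k eq minus one dev}), which I quote. So the real task is the dichotomy in the second assertion. Write $\theta,a,\phi$ for the mean curvature, scale factor and scalar field of $\mD[V](\mfI)$. The Einstein non-linear scalar field equations reduce to the Friedmann system: the Hamiltonian constraint $\theta^2=\tfrac32\phi_t^2+3V(\phi)+9/a^2$, together with $\dot a=a\theta/3$ and $\phi_{tt}+\theta\phi_t+V'(\phi)=0$; in particular $\dot\theta=-\tfrac32\phi_t^2-3/a^2<0$, and by Lemma~\ref{lemma:ex un k minus one} $\theta>0$ on $J$ with $\theta\to\infty$ as $t\downarrow 0$.

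The organising object is $u:=a^3\phi_t$, which is constant when $V\equiv 0$ and in general satisfies $\dot u=-a^3V'(\phi)$. The plan is to show first that $u$ has a finite limit $u_-$ as $t\downarrow 0$. Since $V\ge 0$, the constraint gives $|\phi_t/\theta|\le\sqrt{2/3}$, hence $|d\phi/d\ln a|=3|\phi_t/\theta|\le\sqrt6$, so $|\phi(t)|\le\sqrt6\,|\ln a(t)|+C$ near the singularity; combined with $|V'(s)|\le c_1e^{\sqrt6\a_V|s|}$ from $V\in\mfP_{\a_V}^1$ this yields $|V'(\phi(t))|\le C\,a(t)^{-6\a_V}$. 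The constraint also gives $\theta\ge 3/a$, i.e. $\dot a\ge 1$, so $a$ is increasing and bounded above on $(0,t_0]$; as $3-6\a_V>0$ this makes $|\dot u|=a^3|V'(\phi)|\le C\,a^{3-6\a_V}$ bounded there, whence $\int_0^{t_0}|\dot u|\,dt<\infty$ and $u\to u_-$. One then checks $a(t)\to 0$ as $t\downarrow 0$ (otherwise $\phi_t=u/a^3$, hence $\phi$ and $V(\phi)$, would be bounded, forcing $\theta$ bounded and contradicting the crushing singularity), and integrating $\dot a\ge 1$ gives $a(t)\ge t$.

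Now split on $u_-$. If $u_-\ne 0$, the constraint gives $\theta^2=\tfrac32 u_-^2/a^6\,(1+o(1))$, since $3V(\phi)=O(a^{-6\a_V})=o(a^{-6})$ and $9/a^2=o(a^{-6})$; hence $\phi_t/\theta\to\sgn(u_-)\sqrt{2/3}$ and the development is matter dominated, with $\theta^{2/3}a^2\to(\tfrac32)^{1/3}|u_-|^{2/3}>0$. From here I would promote these crude asymptotics to the full set of limits required by Definition~\ref{def:ind data on sing k negative} — $\mK\to\roId/3$, $\theta^{-1}\phi_t\to\Phi_1$ with $\Phi_1^2=2/3$, $\phi+\theta^{-1}\phi_t\ln\theta\to\Phi_0$, and $\bge(\theta^{\mK}v,\theta^{\mK}w)\to\msH$ — by a sharper expansion of $u$, $a$ and $\phi$ (of the type behind Remark~\ref{remark:as dos k minus one} and the proof of Proposition~\ref{prop:dos ind dev k minus one}), so that $\mD[V](\mfI)$ induces the data $(\bM,\msH,\roId/3,\Phi_0,\Phi_1)\in\mS_-^{\iso}$. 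If instead $u_-=0$, then using $|u(t)|\le\int_0^t a^3|V'(\phi)|\,ds\le C\,a(t)^{3-6\a_V}t$ (monotonicity of $a$) together with $a(t)\ge t$ one gets $a^2\phi_t^2=u^2/a^4\to 0$ at a power rate — this is exactly where $\a_V<1/3$ is needed — while $a^2V(\phi)\le C\,a^{2-6\a_V}\to 0$; hence $\tfrac32\phi_t^2/(9/a^2)\to 0$ and $3V/(9/a^2)\to 0$, so $\dot a=\tfrac13\sqrt{9+\tfrac32 u^2/a^4+3a^2V(\phi)}\to 1$ and $a(t)/t\to 1$. Also $|\phi_t|=|u|/a^3\le C\,t^{1-6\a_V}$ is integrable near $0$ (again $\a_V<1/3$), so $\phi(t)\to\phi_\infty$ for some $\phi_\infty$, and the metric becomes asymptotic to $-dt\otimes dt+a_\Lambda^2\bge_-$ with $\Lambda=V(\phi_\infty)$; matching leading-order asymptotics against the solution of Proposition~\ref{prop:Milne lambda as k minus one} (and its uniqueness) identifies $\mD[V](\mfI)$ as having asymptotics of that type.

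The main obstacles are the two ``upgrade'' steps, not the a priori bounds. In the matter-dominated case, $u\to u_-$ controls $\phi_t$ only to leading order, whereas convergence of $\phi+\theta^{-1}\phi_t\ln\theta$ (where the leading $\ln a$ terms cancel) and of $\theta^{2/3}a^2\bge$ with a rate strong enough to match Proposition~\ref{prop:dos ind dev k minus one} requires genuine power-law error estimates for $u$, $a$, $\phi$ — essentially rederiving Remark~\ref{remark:as dos k minus one} from this end. In the vacuum-dominated case, the comparable difficulty is the perturbative ODE comparison needed to pass from ``$\dot a\to 1$, $\phi\to\phi_\infty$'' to the assertion that the solution is genuinely asymptotic, in the sense of Proposition~\ref{prop:Milne lambda as k minus one}, to the Milne–$\Lambda$ solution, together with justifying the bootstrap that yields $a(t)/t\to 1$ and the sharp rates. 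Both are technical but routine given the a priori estimates above and the results already established for the $k=-1$ setting.
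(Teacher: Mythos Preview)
Your argument is correct, but it takes a genuinely different route from the paper. The paper's proof of this proposition is a two-line citation of Lemma~\ref{lemma:ex un k minus one} and Proposition~\ref{prop:two outcomes as k minus one}; the latter works entirely in the expansion-normalised $\tau$-time and imports the $F_0=0$ versus $F_0>0$ dichotomy from the proof of Theorem~\ref{thm:dichotomy}, using the monotone quantity $F(\tau)=\exp\bigl(-\int_\tau^0 2[2-q(s)]\,ds\bigr)\Psi(\tau)$. You instead organise the argument around $u=a^3\phi_t$, which is conserved when $V\equiv 0$, and split on whether its limit $u_-$ vanishes. This is more elementary and more transparently tied to the FLRW structure; the paper's approach, by contrast, has the virtue of being literally the same machinery as in the anisotropic Bianchi analysis, so no new ideas are needed.

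One simplification you may be missing: in the $u_-=0$ case you have already shown $\phi(t)\to\phi_\infty$ and you know $\theta(t)\to\infty$. But the uniqueness clause in Proposition~\ref{prop:Milne lambda as k minus one} is stated for \emph{all} solutions with exactly those two properties, with no rate hypotheses. So once $\phi\to\phi_\infty$ is established, that uniqueness immediately identifies your development with the one produced there, and the asymptotics (\ref{eq:phi asymptotics vacuum hyp t ver spec})--(\ref{eq:theta as vacuum setting k minus one spec}) follow for free; the ``perturbative ODE comparison'' you flag as an obstacle is therefore unnecessary. In the $u_-\neq 0$ case your remaining obstacle is genuine but routine: feeding $|\dot u|\le Ca^{3-6\a_V}$ and the leading-order $a\sim(\mathrm{const})\,t^{1/3}$ back into your estimates gives $|u-u_-|=O(t^{2-2\a_V})$, and iterating once yields power-rate convergence of $\theta^{-1}\phi_t$ and $\theta^{1/3}a$, from which convergence of $\phi+\theta^{-1}\phi_t\ln\theta$ follows with exactly the rate $\g_V=\min\{4/3,2(1-\a_V)\}$ of Remark~\ref{remark:as dos k minus one}.
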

\begin{proof}
  The statement is an immediate consequence of Lemma~\ref{lemma:ex un k minus one} and Proposition~\ref{prop:two outcomes as k minus one}. 
\end{proof}
Fixing a complete hyperbolic $3$-manifold $(\bM,\bge_-)$, where the scalar curvature of $\bge_-$ is $-6$, the solutions with asymptotics as in
Proposition~\ref{prop:Milne lambda as k minus one} are parametrised by one parameter, the asymptotic value of the scalar field. However, the solutions
inducing initial data on the singularity are, in addition to $\Phi_1\in\{\sqrt{2/3},-\sqrt{2/3}\}$, parametrised by two parameters: $\Phi_0$ and the
constant multiplying $\bge_-$ in order to yield $\msH$. From this point of view, one would naively expect generic solutions to induce data on the
singularity. In order to prove a corresponding statement, it is convenient to first parametrise the set of initial data introduced in
Definition~\ref{def:id k minus one}.
Fixing $(\bM,\bge_-)$ as above, if $(\bM,\bge,\bk,\bphi_0,\bphi_1)\in\mN[V]$ are such that $\bge=\a\bge_-$ and $\bk=\a\b\bge_-$, the constraint
(\ref{eq:ham con id k minus one}) reads
\begin{equation}\label{eq:ham con al be k minus one}
  6\b^2=\bphi_1^2+\tfrac{6}{\a^2}+2V(\bphi_0).
\end{equation}
From this point of view, it is natural to introduce
\begin{equation}\label{eq:sfN minus V def}
  \sfN_-[V]:=\{(\b,\bphi_0,\bphi_1)\in [0,\infty)\times\rn{2}\ |\ 6\b^2-\bphi_1^2-2V(\bphi_0)>0\}.
\end{equation}
\index{$\a$Aa@Notation!Symmetry reduced sets of regular initial data!sfN-V@$\sfN_-[V]$}%
Given $(\b,\bphi_0,\bphi_1)\in \sfN_-[V]$, $\a>0$ is uniquely determined by (\ref{eq:ham con al be k minus one}). In this sense, $\sfN_-[V]$ parametrises
initial data, given $(\bM,\bge_-)$. With this terminology, the following holds.
\begin{prop}\label{prop:generic k minus one intro}
  Let $0\leq V\in C^\infty(\ro)$ and $(\bM,\bge_-)$ be a complete hyperbolic $3$-manifold with scalar curvature $-6$. Then the subset of $\sfN_-[V]$
  that gives rise to developments with the property that $\phi(t)$ converges to a finite number is contained in a countable union of codimension one submanifolds.
  In this sense, the outcome represented by Proposition~\ref{prop:Milne lambda as k minus one} corresponds to a set of initial data which is both Baire and
  Lebesgue non-generic. 
\end{prop}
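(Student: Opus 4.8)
\emph{Proof proposal.} The plan is to translate the condition ``$\phi(t)$ has a finite limit as $t\downarrow 0$'' into a statement about the past asymptotics of an autonomous dynamical system, and to identify the corresponding set of initial data with the unstable manifold of a line of equilibria representing the Milne and Milne$-\Lambda$ solutions; since that unstable manifold has codimension one, the conclusion follows.

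First I would recall, from Lemma~\ref{lemma:ex un k minus one} (cf.\ the first part of Proposition~\ref{prop:two outcomes intro k minus one}), that for $\mfI\in\mN[V]$ the development takes, after fixing the time translation, the form \eqref{eq:gSH bge minus} on $J=(0,\infty)$, with a crushing singularity at $t=0$ and $(a,\phi)$ solving the reduced non-linear scalar field ODEs; only $\bge_-$ (scalar curvature $-6$), not the choice of $\bM$, enters these equations, so it suffices to study a single $3$-dimensional system. By the Hamiltonian constraint \eqref{eq:ham con id k minus one}, cf.\ \eqref{eq:ham con al be k minus one}, the state of $(a,\phi)$ at a fixed reference time $t_0$ is a point of $\sfN_-[V]$, see \eqref{eq:sfN minus V def}, and by smooth dependence on initial conditions this evaluation is a smooth parametrisation (a diffeomorphism onto its image). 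Passing to the expansion-normalised variables $x:=\theta^{-1}\phi_t$ and $\Omega:=\bS/\theta^2\in[-\tfrac23,0]$, keeping $\phi$ (which enters only through $V'\circ\phi$), and using the dimensionless time $\tau$ with $\rod\tau=\tfrac13\theta\,\rod t$, one obtains an autonomous smooth system on $\{\,x^2\le 1,\ \Omega\in[-\tfrac23,0]\,\}\times\ro$ for which the singularity is at $\tau\to-\infty$ and $\rod\phi/\rod\tau$ is a fixed positive multiple of $x$. The solutions of Proposition~\ref{prop:Milne lambda as k minus one} appear here as the orbits through (or emanating from) the line of equilibria $\mathcal{M}:=\{x=0,\ \Omega=-\tfrac23\}$.

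The core step is the following dichotomy. If $\phi(t)$ converges as $t\downarrow0$ then $\int_{-\infty}^{\tau_0}x\,\rod\tau$ converges, and since $\rod\phi/\rod\tau$ is proportional to $x$, the coordinate $\phi$ is constant on the compact connected invariant $\a$-limit set of the orbit; hence $x\equiv0$ there, so $x(\tau)\to0$ and $\phi(\tau)\to\phi_\infty$, and the $\a$-limit set is a single equilibrium in $\{x=0,\ \phi=\phi_\infty\}$. The only equilibria there are the point of $\mathcal{M}$ with $\Omega=-\tfrac23$ and, when $V'(\phi_\infty)=0$, a point with $\Omega=0$; but the latter would force $3V(\phi_\infty)/\theta^2\to1$, i.e.\ $\theta$ bounded as $t\downarrow0$, contradicting the crushing singularity. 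Hence $\phi$ converges if and only if the orbit has a point of $\mathcal{M}$ as its $\a$-limit, i.e.\ the data lie on the unstable manifold $W^u(\mathcal{M})$; this excludes the kinetic-dominated orbits (on which $\phi$ diverges) and is consistent with these Milne-like developments not inducing data on the singularity in the sense of Definition~\ref{def:ndvacidonbbssh k minus one} and Remark~\ref{remark:Milne dos}, since for them $\theta^{-1}\phi_t\to0\ne\pm\sqrt{2/3}$.

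Finally I would show that $W^u(\mathcal{M})$, pulled back to $\sfN_-[V]$, is contained in a countable union of codimension-one submanifolds. Linearising at a point $(0,-\tfrac23,\phi_\ast)$ of $\mathcal{M}$ gives a zero eigenvalue (tangent to $\mathcal{M}$) together with one negative and one positive eigenvalue, the nonzero ones bounded away from $0$ uniformly in $\phi_\ast$; thus $\mathcal{M}$ is normally hyperbolic with a one-dimensional unstable normal bundle. By the centre--unstable manifold theorem, each compact sub-arc $\mathcal{M}\cap\{|\phi|\le n\}$ carries a smooth local unstable manifold of dimension $2$, hence codimension $1$; flowing these forward, exhausting $\mathcal{M}$ by $n\in\nn{}$ and by finite flow times, and pulling back by the evaluation diffeomorphism covers the bad subset of $\sfN_-[V]$ by countably many embedded codimension-one submanifolds. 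Since a countable union of embedded submanifolds of positive codimension in a finite-dimensional manifold is Lebesgue-null (in every chart) and meagre, its complement has full measure and is residual, which gives both the Lebesgue and the Baire non-genericity. The main obstacle is this last step: $\mathcal{M}$ is a non-hyperbolic line of equilibria, so one must use normally-hyperbolic-manifold theory rather than the plain stable manifold theorem, and because the $\phi$-direction is non-compact and $(\ln V)'$ need not be bounded for a general non-negative $V\in C^\infty(\ro)$, one is forced to localise along $\mathcal{M}$ and assemble $W^u(\mathcal{M})$ from countably---rather than finitely---many pieces.
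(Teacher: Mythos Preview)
Your overall strategy matches the paper's: reduce to an autonomous three-dimensional ODE system, identify the Milne-like solutions with a line of equilibria, observe that the linearisation at each such equilibrium has one zero eigenvalue (tangent to the line) together with one positive and one negative eigenvalue, invoke centre--unstable manifold theory for normally hyperbolic invariant manifolds (the paper cites Fenichel via \cite[Theorem~B.3]{Radermacher}), and assemble the result from compact sub-arcs of the line together with countably many flow images. Your $\a$-limit-set argument for ``$\phi$ converges $\Rightarrow$ the orbit tends to a point on the line of equilibria'' is a legitimate substitute for the paper's implicit reliance on the uniqueness statement in Proposition~\ref{prop:Milne lambda as k minus one}.

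The one place where your proposal needs adjustment is the choice of phase-space variables. With $(x,\Omega,\phi)$ the evolution of $x$ involves $V'(\phi)/\theta^{2}$; the Hamiltonian constraint only gives $V(\phi)/\theta^{2}=\tfrac{1}{3}-\tfrac{1}{2}x^{2}+\tfrac{1}{2}\Omega$, so closing the system forces a factor $V'(\phi)/V(\phi)$, and your claim that $\phi$ ``enters only through $V'\circ\phi$'' is not correct. Since only $0\le V\in C^{\infty}(\ro)$ is assumed, $V$ may vanish and your system need not be smooth (indeed you yourself flag $(\ln V)'$ as a difficulty). The paper avoids this by using $(\phi_{0},\phi_{1},\vartheta)=(\phi,\phi_{\tau},1/\theta)$ as variables; the system (\ref{seq:phi vartheta version hyp}) is then polynomial in $\vartheta$, hence globally smooth, the line of equilibria is $\{\phi_{1}=0,\ \vartheta=0\}$, and the linearisation has eigenvalues $0,-2,1$ at every point, independently of $\phi_{\infty}$ and of any regularity of $V'/V$. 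With this replacement of $\Omega$ by $\vartheta$ your argument goes through unchanged.
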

\begin{proof}
  The proof is to be found in Subsection~\ref{subsection:generic behaviour k minus one}.
\end{proof}

\section{Future asymptotics}\label{ssection:futureasymptotics}

Next, we turn to the future asymptotics. Here we, roughly speaking, consider two situations: either $V$ is proportional to an exponential, or $V$
has a positive lower bound. We begin by considering the latter case. 

\begin{prop}\label{prop:futureassteponeintro}
  Assume $V\in C^{\infty}(\rn{})$ to have a strictly positive lower bound; $V(s)\rightarrow\infty$ as $|s|\rightarrow\infty$; and $V'$ to only
  have simple zeroes. Fix a Bianchi class A type $\mfT\neq\mrIX$ and an $\mfs\in\{\iso,\roLRS,\roper,\rogen\}$. Then there is a subset
  ${}^{\rosc}\mfG_\mfT^\mfs[V]$ of ${}^{\rosc}\mfB_\mfT^\mfs[V]\cap \{\theta>0\}$ which is open, dense and has full measure. In addition, if
  $\xi\in {}^{\rosc}\mfG_\mfT^\mfs[V]$; $\mfI\in \mB_\mfT^\mfs[V]$ is such that $[\mfI]=\xi$; and $(M,g,\phi)=\md[V](\mfI)$, then the following holds.
  First, $M$ and $g$ are of the form $M=G\times J$ and (\ref{eq:gSH}) respectively, where $G$ is a Bianchi type $\mfT$ Lie group and $J=(t_-,\infty)$.
  Moreover, there is a left invariant Riemannian metric $\bge_\infty$ on $G$; an $s_0\in\ro$ with the property that $V'(s_0)=0$ and $V''(s_0)>0$; and
  constants $C>0$ and $b>0$ such that
  \begin{subequations}
    \begin{align}
      |\phi(t)-s_{0}|+|\phi_{t}(t)| \leq & Ce^{-bt},\\
      |\bK(t)-H\roId|_{\bge_\infty}+|e^{-Ht}\bge(t)-\bge_\infty|_{\bge_\infty} \leq & Ce^{-bt}\label{eq:K m HId eHt bge inf intro}
    \end{align}
  \end{subequations}  
  for all $t\geq t_{0}$, where $H:=\sqrt{V(s_{0})/3}$, $t_{0}\in J$ and $\bge(t)$ and $\bK(t)$ are the metric and Weingarten map induced on
  $M_t=G\times\{t\}$ respectively. 
\end{prop}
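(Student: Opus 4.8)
The plan is to pass to the Wainwright--Hsu/Hubble-normalised formulation of the equations, see Section~\ref{ssection:whsuform}, and to identify the future attractor as the de~Sitter-like fixed point at which the shear and (normalised) spatial curvature variables vanish, $\phi$ sits at a non-degenerate local minimum $s_0$ of $V$ and $\phi_t=0$. Since $V\geq c>0$, the expansion is accelerated, so the first step is a cosmic-no-hair argument. Using Lemma~\ref{lemma:BianchiAdevelopment} to control the sign of $\theta_t$ and to exclude recollapse (here $\mfT\neq\mrIX$ is essential), one shows that the existence interval has the form $(t_-,\infty)$, that $\theta$ is monotone and bounded below by $[3c]^{1/2}>0$, and --- via the monotone functions available in the Wainwright--Hsu picture together with the bound on the deceleration parameter forced by $V\geq c>0$ --- that the expansion-normalised shear and the normalised $N_i$ tend to zero. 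With the Hamiltonian constraint (\ref{eq:ham con original}) this makes $\theta$ converge to a limit $\theta_\infty\geq[3c]^{1/2}$ governed asymptotically by $V(\phi)$.

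Next I would treat the scalar field. Spatial homogeneity reduces the wave equation to $\phi_{tt}+\theta\phi_t=-V'(\phi)$, so $E:=\tfrac12\phi_t^2+V(\phi)$ satisfies $\dot E=-\theta\phi_t^2\leq 0$; as $E$ is bounded below it converges, and since $\theta$ is bounded below we get $\int_{t_0}^{\infty}\phi_t^2\,dt<\infty$, which with the uniform bound on $\phi_{tt}$ yields $\phi_t\to 0$. Because $V(s)\to\infty$ as $|s|\to\infty$, $\phi$ stays in a compact set, and its $\omega$-limit set is a connected flow-invariant subset of $\{\,\phi_t=0,\ V=E_\infty\,\}$; as $V'$ has only simple, hence isolated, zeros this forces $\phi$ to converge to a single critical point $s_*$ of $V$. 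Linearising the full normalised system about the corresponding fixed point: the shear and curvature directions are exponentially contracting because the effective cosmological constant $V(s_*)$ is strictly positive, while the two scalar-field eigenvalues are the roots of $\lambda^2+\theta_\infty\lambda+V''(s_*)=0$, which both have strictly negative real part exactly when $V''(s_*)>0$. Thus for $s_*=s_0$ a non-degenerate minimum the fixed point is hyperbolic and attracting, and the stable manifold theorem delivers exponential convergence of $\phi$, $\phi_t$, $\bK\to H\roId$ and of the appropriately rescaled spatial metric to a limiting left invariant metric $\bge_\infty$, with $H=[V(s_0)/3]^{1/2}$; this is (\ref{eq:K m HId eHt bge inf intro}).

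It remains to produce ${}^{\rosc}\mfG_\mfT^\mfs[V]$. Let $s_*$ range over the critical points of $V$ with $V''(s_*)<0$; there are at most countably many, all non-degenerate local maxima, and for each the associated fixed point of the normalised system has a one-dimensional unstable manifold (the $\lambda>0$ direction above) with all remaining directions contracting. Hence the set of developments whose scalar field converges to $s_*$ is, after passing to isometry classes as in Proposition~\ref{prop:iso id to iso sol} and working in the smooth state space ${}^{\rosc}\mfB_\mfT^\mfs[V]$, contained in a smooth submanifold of positive codimension. Removing the countable union of these submanifolds from ${}^{\rosc}\mfB_\mfT^\mfs[V]\cap\{\theta>0\}$ leaves an open, dense, full-measure set ${}^{\rosc}\mfG_\mfT^\mfs[V]$; every element induces, by the previous paragraph, a development with the asserted asymptotics (the decay estimates being valid for all $t\geq t_0$ for a suitable $t_0\in J$), and openness follows from continuity of the stable/unstable splitting together with the fact that convergence to a non-degenerate minimum persists under perturbation.

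I expect the main obstacle to be the control of the exceptional set: making precise, uniformly in the Bianchi and symmetry type, that ``the scalar field converges to a non-degenerate maximum'' cuts out a positive-codimension submanifold of isometry classes requires matching the centre-stable manifolds of the relevant fixed points with the smooth structure on ${}^{\rosc}\mfB_\mfT^\mfs[V]$, and one must first exclude more exotic $\omega$-limit behaviour (heteroclinic chains between critical points, orbits that do not isotropise) before the linearised picture can be invoked. A secondary difficulty is supplying, for the higher Bianchi types, the monotone quantity that survives the presence of the potential and drives the isotropisation step; the techniques of \cite{RinInv} and \cite{RinPL} are the natural starting point.
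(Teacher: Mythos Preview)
Your proposal is essentially correct and follows the same route as the paper: establish convergence of $\sigma$, $n$, $\phi_t$ and $\phi$ to a fixed point, linearise there, and remove the centre-stable manifolds of the fixed points with $V''<0$.

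A few points of comparison. First, the paper shortcuts the whole isotropisation and scalar-field-convergence step by citing \cite[Corollaries~26.4 and 26.7]{stab}, which already give $\sigma_{ij}\to 0$, $\theta\to (3V_1)^{1/2}$, $\phi_t\to 0$ and convergence of $\phi$; your energy argument $\dot E=-\theta\phi_t^2$ together with the $\omega$-limit reasoning is exactly what underlies those corollaries, so your ``secondary difficulty'' is not a difficulty at all. Your worry about heteroclinic chains is also unfounded: once you know $\phi_t\to 0$ and the $\omega$-limit set is invariant, invariance forces $V'=0$ on the limit set, and since the zeros of $V'$ are isolated the connected limit set is a single point. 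Second, the paper linearises in the original variables (after eliminating the constraints to get an unconstrained system in $\sigma_{22},\sigma_{33},n_{11},n_{22},n_{33},\phi,\phi_t$) rather than in Wainwright--Hsu variables; the eigenvalues come out as $-3H$ (shear block), $-H$ ($n$-block) and the roots of $\lambda^2+3H\lambda+V''(s_0)=0$, matching your computation. Third, your genericity paragraph is slightly too quick: the stable manifold at a bad fixed point is only a local object, so the full set of initial data converging to that fixed point is not one submanifold but the union of its backward images under the flow. The paper makes this precise by taking a countable basis $\{W_i\}$, intersecting with the local stable manifold, and flowing each piece backward over rational times; the result is a countable union of codimension-one submanifolds, hence of measure zero. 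This is the concrete content of your ``matching the centre-stable manifolds with the smooth structure''.
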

\begin{proof}
  The statement follows from Propositions~\ref{prop:futureasstepone} and \ref{prop:futureas} below. 
\end{proof}
Due to the results of \cite{RinInv}, taking appropriate quotients of the spacetimes in Proposition~\ref{prop:futureassteponeintro} leads to
solutions that are future globally non-linearly stable.
\begin{cor}\label{cor:stabilitylb}
  Fix a solution to the Einstein non-linear scalar field equations as constructed in Proposition~\ref{prop:futureassteponeintro}. Let $\Gamma$
  be a co-compact subgroup of the isometry group of the initial data which acts freely and properly discontinuously and let $\Sigma$ be the
  corresponding compact
  quotient of the Lie group. Taking the corresponding quotient of the solution defines a solution on $\Sigma\times (t_-,\infty)$, here referred to
  as the background solution. Make a choice of Sobolev norms $\|\cdot\|_{H^{l}}$
  on tensor fields on $\Sigma$ and fix a $t_{0}\in (t_-,\infty)$. Let $(\Sigma,\bge,\bk,\bphi_{0},\bphi_{1})$ denote the initial data induced on
  $\Sigma\times \{t_{0}\}$ by the background solution. Then there is an $\e>0$ such that if $(\Sigma,\bah,\bka,\bvarphi_{1},\bvarphi_{0})$ are initial
  data for the Einstein non-linear scalar field equations with the same $V$ as in the case of the background solution and
  \[
  \|\bah-\bge\|_{H^{4}}+\|\bka-\bk\|_{H^{3}}+\|\bphi_{0}-\bvarphi_{0}\|_{H^{4}}+\|\bphi_{1}-\bvarphi_{1}\|_{H^{3}}\leq\e,
  \]
  then the maximal globally hyperbolic development corresponding to $(\Sigma,\bah,\bka,\bvarphi_{1},\bvarphi_{0})$ is future causally geodesically
  complete and there are expansions of the form given in the statement of \cite[Theorem~2, pp.~131--132]{RinInv} to the future.
\end{cor}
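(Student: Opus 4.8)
The plan is to derive the corollary directly from the future global nonlinear stability theorem \cite[Theorem~2, pp.~131--132]{RinInv}, the only genuine work being to check that the background solution, after passing to the quotient by $\Gamma$, lies in the class of reference solutions to which that theorem applies. The guiding picture is that a scalar field resting at a non-degenerate local minimum $s_0$ of a strictly positive potential acts, to leading order, like a positive cosmological constant $\Lambda=V(s_0)=3H^2$ together with a massive fluctuation of mass $\sqrt{V''(s_0)}$; the spacetimes produced by Proposition~\ref{prop:futureassteponeintro} are therefore future asymptotically de Sitter-like, and de Sitter-like spacetimes with closed spatial sections are future stable.

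First I would record the structure of the background and transfer it to the quotient. By Proposition~\ref{prop:futureassteponeintro}, the development $(M,g,\phi)=\md[V](\mfI)$ has $M=G\times J$ with $J=(t_-,\infty)$, and there are a left invariant metric $\bge_\infty$ on $G$, a constant $H=\sqrt{V(s_0)/3}>0$ with $V'(s_0)=0$ and $V''(s_0)>0$, and constants $C,b>0$ such that $|\phi(t)-s_0|+|\phi_t(t)|\leq Ce^{-bt}$ and the estimate (\ref{eq:K m HId eHt bge inf intro}) holds for $t\geq t_0$. Since $\Gamma$ is a co-compact subgroup of the isometry group of $\mfI$ acting freely and properly discontinuously, the discussion following Proposition~\ref{prop:iso id to iso sol} shows that $\Gamma\times\roId$ acts freely and properly discontinuously on $(M,g,\phi)$ and that the quotient is a spatially locally homogeneous solution on $\Sigma\times(t_-,\infty)$. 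The estimates above involve only $\Gamma$-invariant quantities, so they descend unchanged: on the closed manifold $\Sigma$ the rescaled metric $e^{-Ht}\bge(t)$ and the Weingarten map $\bK(t)$ converge exponentially to $\bge_\infty$ and $H\,\roId$ respectively, while $\phi\to s_0$ and $\phi_t\to 0$ exponentially, $s_0$ being a non-degenerate minimum of $V$.

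Next I would verify, hypothesis by hypothesis, that this background is of the type treated in \cite[Theorem~2]{RinInv}: the spatial manifold $\Sigma$ is closed; the potential $V$ has a strictly positive lower bound, $V(s)\to\infty$ as $|s|\to\infty$, and $V'$ has only simple zeroes, so in particular $V''(s_0)>0$ at the limiting value; and the background undergoes exponential expansion with rescaled geometry and scalar field converging exponentially to a fixed state. Having fixed a $t_0$ large enough that all of the above estimates are in force, I would then invoke \cite[Theorem~2, pp.~131--132]{RinInv} at the hypersurface $\Sigma\times\{t_0\}$ to produce the $\e>0$ of the statement: any initial data set $(\Sigma,\bah,\bka,\bvarphi_1,\bvarphi_0)$ for the Einstein non-linear scalar field equations with the same potential and $\e$-close to $(\Sigma,\bge,\bk,\bphi_0,\bphi_1)$ in the indicated Sobolev norms (and hence, in particular, satisfying the constraints) has a maximal globally hyperbolic development that is future causally geodesically complete and carries the asymptotic expansions described in \cite[Theorem~2, pp.~131--132]{RinInv}. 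That is exactly the assertion of the corollary.

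The step I expect to require the most care is the matching of reference backgrounds: \cite{RinInv} may well be formulated around a background that expands exactly exponentially --- a fixed spatially homogeneous model with $\phi\equiv s_0$ and metric $\sim e^{2Ht}\bge_\infty$ --- whereas the background here only converges to such a model, exponentially fast. If that is the case, the remedy is to use precisely this exponential convergence: by choosing the base time $t_0$ sufficiently far to the future, the data induced by the true background at time $t_0$ lie in an arbitrarily small Sobolev neighbourhood of the corresponding model data, so one applies the stability theorem about the model and absorbs the discrepancy into the smallness parameter. Beyond this bookkeeping, everything substantive --- the choice of gauge adapted to the expanding background, the hierarchy of energy estimates for the Einstein--scalar field system, and the derivation of the asymptotics --- is already contained in \cite{RinInv}, so no new analytic input is needed.
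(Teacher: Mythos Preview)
Your proposal is essentially correct and close in spirit to the paper's own argument, but there is a small difference worth flagging. The paper does not invoke \cite[Theorem~2]{RinInv} as a black box; instead it says the proof is ``similar to, but slightly simpler than, the proof of \cite[Theorem~4, pp.~134--135]{RinInv}'' and directs the reader to the stability portion of that argument (from \cite[p.~203]{RinInv} onward). In other words, rather than verifying that the present background fits the literal hypotheses of Theorem~2, the paper re-runs the stability argument itself for this background. Your last paragraph correctly anticipates why a direct black-box application of Theorem~2 could be delicate: the reference solutions in \cite{RinInv} are a specific class, and the Bianchi backgrounds here are only asymptotically of that form, with $\bge_\infty$ a left invariant (not necessarily flat) metric on a quotient of a general unimodular Lie group. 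Your proposed remedy---take $t_0$ late so the background data are Sobolev-close to the model data and absorb the discrepancy into $\e$---is exactly the mechanism used in the proof of \cite[Theorem~4]{RinInv}, so in substance you and the paper agree; the paper just skips the question of whether Theorem~2's hypotheses are literally met by pointing directly to the relevant argument.
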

\begin{remark}
  Due to \cite{rav}, every unimodular $3$-dimensional Lie group has a co-compact subgroup.
\end{remark}
\begin{remark}
  One particular consequence of the corollary is that the background solution is future causally geodesically complete (note that this is not
  claimed in Proposition~\ref{prop:futureassteponeintro}). 
\end{remark}
\begin{remark}\label{remark:global nonlinear stability}
  Combining the corollary with the conclusions concerning the past asymptotics derived previously and the results of \cite{GPR} yields past
  and future global non-linear stability, curvature blow up in the direction of the singularity etc. for large classes of spatially locally
  homogeneous solutions. We refer the interested reader to \cite{GPR} for more detailed information. 
\end{remark}
\begin{proof}
  The proof of the statement is similar to, but slightly simpler than, the proof of \cite[Theorem~4, pp.~134-135]{RinInv}. In fact, it is sufficient
  to take the steps starting with the paragraph entitled stability at the bottom of \cite[p.~203]{RinInv} until the end of the proof. We leave
  the detailed verification to the reader. 
\end{proof}
The second case to consider is that of an exponential potential:
\begin{equation}\label{eq:Vexppot}
  V(\phi)=V_{0}e^{-\lambda\phi},
\end{equation}
where $0<V_{0}\in\rn{}$ and $\lambda\in (0,\sqrt{2})$. The reason we require $\lambda\in (0,\sqrt{2})$ is that we want the corresponding solutions
to the Einstein non-linear scalar field equations to exhibit accelerated expansion. In fact, under these assumptions, the class of spatially
homogeneous solutions of interest in this paper do not only exhibit accelerated expansion, they also exhibit isotropisation. The spatially
homogeneous setting is considered, e.g., in \cite{hall,KaM}; see also \cite[Section~12]{RinPL}. The future global non-linear
stability of suitable quotients of Bianchi class A solutions is considered in \cite{HaR,RinPL}. In fact, for spatially closed and spatially
locally homogeneous quotients of these spacetimes, there is a stability result analogous to Corollary~\ref{cor:stabilitylb}; see
\cite[Theorem~3, p.~162]{RinPL}. Moreover, we can, in analogy with Remark~\ref{remark:global nonlinear stability}, derive future and past global
non-linear stability results for large classes of spatially locally homogeneous solutions; see \cite{GPR} for details. 

Next, we consider the isotropic negative spatial curvature case. 
\begin{prop}\label{prop:generic k minus one intro exp}
  Assume $V\in C^{\infty}(\rn{})$ to have a strictly positive lower bound; $V(s)\rightarrow\infty$ as $|s|\rightarrow\infty$; and $V'$ to only
  have simple zeroes. Let $(\bM,\bge_-)$ be a complete hyperbolic $3$-manifold with scalar curvature $-6$. Then the subset of $\sfN_-[V]$,
  see (\ref{eq:sfN minus V def}), that gives rise to developments with the property that $\phi(t)\rightarrow s_0\in\ro$ with $V''(s_0)<0$ is
  contained in a countable union of one dimensional submanifolds. Call the complement $\sfG_-^+[V]$. Then, given initial data in $\sfG_-^+[V]$,
  the corresponding solution has the properties that there is an $s_0\in\ro$ such that $\phi(t)\rightarrow s_0\in\ro$; $V'(s_0)=0$; $V''(s_0)>0$;
  $\phi_t(t)\rightarrow 0$; and $\theta(t)\rightarrow [3V(s_0)]^{1/2}$ as $t\rightarrow \infty$. Moreover, the convergence is exponential. 
\end{prop}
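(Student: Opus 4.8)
The plan is to reduce the assertion to the ODE system governing spatially locally homogeneous and isotropic negative-curvature developments and to analyse it as a low-dimensional dynamical system near the critical points of $V$. By Lemma~\ref{lemma:ex un k minus one}, a development of $\mfI=(\bM,\bge,\bk,\bphi_0,\bphi_1)\in\mN[V]$ may be written $g=-dt\otimes dt+a^2(t)\bge_-$ on $\bM\times(0,\infty)$, where $(a,\phi)$ solve the Hamiltonian constraint $\tfrac13\theta^2=\tfrac12\phi_t^2+V(\phi)+3a^{-2}$, the Klein--Gordon equation $\phi_{tt}+\theta\phi_t+V'(\phi)=0$, and $\theta_t=-\tfrac32\phi_t^2-3a^{-2}$, with $\theta=3\dot a/a$ the mean curvature; this is the isotropic Bianchi type I system of the setting of Proposition~\ref{prop:futureassteponeintro} together with the extra term $3a^{-2}$. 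First I would record the qualitative future behaviour: the constraint gives $\theta^2\geq 3V(\phi)\geq 3\inf_s V(s)>0$, so $\theta$ never vanishes; since $\theta(t_0)=3\b>0$ (an element of $\sfN_-[V]$ has $\b>0$, because $6\b^2>\bphi_1^2+2V(\bphi_0)>0$), we get $\theta>0$ for all future times, $a$ increasing, $\theta$ decreasing and bounded below, hence $\theta\to\theta_\infty\geq(3\inf_s V(s))^{1/2}>0$; and since $\dot a/a=\theta/3$ is bounded below, $a(t)\to\infty$ exponentially, so $3a^{-2}\to 0$. Thus the extra term is an exponentially decaying perturbation and the system asymptotically reduces to the spatially flat one.

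Next I would run a Lyapunov argument for the scalar field. The energy $E:=\tfrac12\phi_t^2+V(\phi)$ satisfies $\dot E=-\theta\phi_t^2\leq 0$, so $E$ decreases to a limit $E_\infty\geq\inf_s V(s)>0$ and $\int^\infty\theta\phi_t^2\,dt<\infty$; as $\theta$ is bounded below this forces $\int^\infty\phi_t^2\,dt<\infty$. Since $V(\phi(t))\leq E_\infty<\infty$ while $V(s)\to\infty$ as $|s|\to\infty$, $\phi$ is bounded; then $\phi_{tt}$ is bounded (all terms of the Klein--Gordon equation are), whence $\phi_t\to 0$, and therefore $V(\phi(t))=E(t)-\tfrac12\phi_t^2(t)\to E_\infty$. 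Evaluating the Klein--Gordon equation along a sequence realising a point $s_*$ of the $\omega$-limit set of $\phi$ gives $V'(s_*)=0$, so that $\omega$-limit set is a connected compact subset of $\{s:V(s)=E_\infty,\ V'(s)=0\}$; since $V'$ has only simple, hence isolated, zeros this set is a single point $s_0$, and we conclude $\phi(t)\to s_0$, $V'(s_0)=0$, $\phi_t\to 0$, $\theta\to\theta_\infty=(3V(s_0))^{1/2}$.

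Now for the dichotomy. As $V'$ has simple zeros, $V''(s_0)\neq 0$. Writing the constrained system in the variables $(\phi,\phi_t,u)$ with $u:=a^{-1}$, the rest point $Q_{s_0}=(s_0,0,0)$ is hyperbolic: the linearisation has the eigenvalue $-\theta_\infty/3<0$ in the $u$-direction together with the two roots of $\mu^2+\theta_\infty\mu+V''(s_0)=0$. If $V''(s_0)>0$, all three eigenvalues lie in the open left half-plane, so $Q_{s_0}$ is a sink; a trajectory converging to it (as in the previous step) then does so exponentially fast, giving the exponential convergence claimed. If $V''(s_0)<0$, the two $\mu$-roots are real of opposite sign, so $Q_{s_0}$ is a saddle with one unstable direction, and its stable set $W^s(Q_{s_0})$ is an immersed submanifold of positive codimension. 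For the genericity assertion I would then use that the parametrisation $\sfN_-[V]\to\{\text{constraint surface},\ \theta>0\}$ induced by (\ref{eq:ham con al be k minus one}) and Definition~\ref{def:id k minus one} is a diffeomorphism onto its image; that the initial data whose developments satisfy $\phi(t)\to s_0$ with $V''(s_0)<0$ are exactly the union over such $s_0$ of the preimages of $W^s(Q_{s_0})$; that only countably many $s_0$ occur (the critical set of $V$ is discrete); and that a countable union of submanifolds of positive codimension is Lebesgue-null and Baire-meagre. Its complement is by definition $\sfG_-^+[V]$, and for initial data there the limiting critical point $s_0$ satisfies $V''(s_0)>0$ (exclusion from the bad set rules out $V''(s_0)<0$, and $V''(s_0)=0$ is impossible), so the first branch of the dichotomy supplies all the asserted limits with exponential rate.

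The step I expect to be the main obstacle is the genuinely global one underlying the second and third paragraphs: establishing for \emph{every} development that $\phi$ stays bounded, $\theta$ stays positive and converges, and $3a^{-2}\to 0$, so that every trajectory actually enters the basin where the sink/stable-manifold description is valid. The positive lower bound on $V$ is precisely what forces $\theta_\infty>0$ and hence makes $Q_{s_0}$ hyperbolic in these naive coordinates — in contrast to settings where $V$ can vanish, no delicate blow-up near the rest point is needed. Once this global control is secured, the remainder is a routine application of the stable manifold theorem together with discreteness of the critical set of $V$, running in close parallel with the isotropic Bianchi type I part of Proposition~\ref{prop:futureassteponeintro}.
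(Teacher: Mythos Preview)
Your approach is essentially the same as the paper's: establish convergence of every trajectory to a critical point of $V$ via a Lyapunov/LaSalle argument, then classify the rest points by linearization and invoke the stable manifold theorem for the saddles. The paper cites \cite[Corollaries~26.4 and 26.7]{stab} for the first step and refers to the proof of Proposition~\ref{prop:futureasstepone} for the second; your self-contained treatment of both is correct, and the global obstacle you flag in your final paragraph is exactly what your second paragraph already handles, so your worry there is unfounded. One small imprecision: you write $V(\phi(t))\leq E_\infty$, but since $E$ is decreasing the correct bound is $V(\phi(t))\leq E(t_0)$; the conclusion is unaffected.

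Two further remarks. First, you work in $(\phi,\phi_t,u)$ with $u=a^{-1}$, so the rest point $Q_{s_0}$ sits on the boundary $u=0$ of the physical region; the paper instead uses $(\phi,\phi_\tau,\vartheta)$ with $\vartheta=\theta^{-1}$ and the rest point at $\vartheta=[3V(s_0)]^{-1/2}$, which lies on the boundary of the constraint region \eqref{eq:vartheta ini cond}. In either formulation the vector field extends smoothly across the relevant boundary (in your case because $\theta$ depends only on $u^2$ via the constraint and $V$ has a positive lower bound), so the hyperbolic fixed-point analysis applies after this extension; you should say so explicitly. Second, your linearization correctly yields one unstable and two stable eigenvalues when $V''(s_0)<0$, so the stable manifold is two-dimensional and the bad set in the three-dimensional space $\sfN_-[V]$ is a countable union of \emph{codimension one} submanifolds (once one globalizes the local stable manifold by flowing at rational times, as in the proof of Proposition~\ref{prop:futureasstepone}). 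The proposition as stated says ``one dimensional'', but the paper's own proof---which records two negative eigenvalues and one positive eigenvalue at the saddle and then refers back to Proposition~\ref{prop:futureasstepone}---produces codimension one as well; the phrase in the statement appears to be a slip for ``codimension one''.
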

\begin{remark}
  Combining this result with the conclusions of Section~\ref{section:as sing k minus one intro}, and the results of \cite{GPR} and \cite{RinInv}
  yields future and past global non-linear stability results for large classes of spatially closed solutions. 
\end{remark}
\begin{proof}
  The proof is to be found in Section~\ref{section:Future as k minus one setting}. 
\end{proof}

\section{Motivation for the restriction to Bianchi class A}\label{ssection:sphom}
In this paper we mainly restrict our attention to Bianchi class A initial data. It is of interest to describe how this restriction fits into the
general context of spatially homogeneous solutions. Defining spatially homogeneous solutions to be the maximal globally hyperbolic
developments of homogeneous initial data, there are, in $3$ dimensions, three possibilities: the initial data are left invariant on a
unimodular Lie group (Bianchi class A); the initial data are left invariant on a non-unimodular Lie group (Bianchi class B); or the initial
data induced on the universal covering space of the initial manifold are invariant under the isometry group of the standard
metric on $\rn{}\times\sn{2}$ (Kantowski-Sachs). Strictly speaking, the Bianchi classification is a classification of $3$-dimensional Lie algebras;
see \cite{ketal} for a more detailed discussion, references and a historical overview. Since we wish to construct solutions for which we can later prove
future and past global non-linear stability, and since our result concerning stable big bang formation is based on the assumption of spatial
compactness, see \cite{GPR}, we are here mainly interested in initial data that admit compact quotients. Compact quotients can arise
in two ways: either there is a co-compact subgroup, say $\Gamma$, of the Lie group $G$; or there is a co-compact subgroup $\Gamma$ of the isometry group
of the initial data which is not a subgroup of $G$. We are mainly interested in the first case, since the second case means that the isometry
group of the initial data typically has to be substantially larger than $G$. This entails additional symmetry assumptions on the initial data (and
therefore more limited classes of models). If $G$ is non-unimodular, there are no co-compact subgroups of $G$; see, e.g., \cite[Section~3.3]{AaS} and
the argument presented on \cite[pp.~865-866]{FIK}. On the other hand, in case $G$ is a unimodular $3$-dimensional Lie group, there are co-compact
subgroups $\Gamma$ of $G$; see \cite{rav}.

\section{Outline of the article}\label{section:outline article}

We begin our analysis in Chapter~\ref{chapter:parametrisations} by discussing how to parametrise isometry classes of regular initial data, initial
data on the singularity and developments by smooth manifolds. In Chapter~\ref{section:constructingadevelopment}, we then turn to the construction
of Bianchi class A developments, given initial data. We begin by deriving the equations in the symmetry reduced setting. We then prove existence of
developments and prove some preliminary results concerning the asymptotics. In particular, we provide conditions ensuring that there is a crushing
singularity and discuss the integrability of the mean curvature. Using this information, we are then in a position to introduce expansion normalised
variables. This is the topic of Section~\ref{ssection:whsuform}, where we introduce the Wainwright-Hsu formulation of the equations. These are the
equations we use throughout most of the article when analysing the asymptotics. In Chapter~\ref{section:asymptoticbehavsing}, we then turn to the
problem of analysing the asymptotics in the direction of the singularity. We begin by characterising the solutions that converge to a special point.
We also make some basic observation concerning the lower Bianchi types. However, the main result of the chapter is Theorem~\ref{thm:dichotomy}
in which we demonstrate that there is a dichotomy between vacuum dominated and matter dominated solutions. We also derive detailed asymptotics
for the matter dominated solutions. We are then left with the problem of analysing the asymptotics in the vacuum dominated setting. This is the
subject of Chapter~\ref{chapter:vacuum dom set}. We prove that the anisotropic and non-LRS vacuum dominated Bianchi type VIII and IX solutions exhibit oscillations
and derive detailed asymptotics in all the other cases. Using this information, we are then in a position to prove that, with the exception of
the solutions with oscillatory singularities, all developments induce data on the singularity. This is the subject of
Chapter~\ref{chapter:as to data on sing}. Next, we turn to the problem of specifying initial data on the singularity. In Chapter~\ref{section:specdatasing}
we address this topic from the point of view of the Wainwright-Hsu formulation. Proving that the map from regular initial data to data on the
singularity has the desired regularity properties turns out to be quite technical. In Chapter~\ref{chapter:map to sing}, we intiate the corresponding
discussion. To begin with, we discuss the regularity of the map from regular initial data to data on the singularity with respect to the
Wainwright-Hsu variables. However, there are many different perspectives. In particular, one can fix the mean curvature of the initial data or
consider general initial data. The former perspective gives rise to additional complications, since the time derivative of the mean curvature
along the Einstein flow need not always be non-zero. After introducing a substantial amount of notation, the first result is contained in
Lemma~\ref{lemma:reg of B}. Essentially as a consequence of this lemma, we obtain regularity of a representation of the map from isometry classes
of initial data with fixed mean curvature to isometry classes of initial data on the singularity; see Lemma~\ref{lemma:local repr map to sing}.
In order to define a smooth structure on the set of isometry classes of developments, we next need to prove that the map, induced by the Einstein
flow, relating isometry classes of initial data with different mean curvature is a local diffeomorphism. This is the topic of
Section~\ref{section:Id fix mean curv}. This section also contains proofs of Lemmas~\ref{lemma:dev smo str nisoI nIX}, \ref{lemma:dev smo str IX}
and \ref{lemma:dev smo str isoI}; i.e., the results endowing isometry classes of developments with a smooth structure. In
Chapter~\ref{chapter:geo res dir of sing}, we prove geometric existence and uniqueness results of developments corresponding to data on the
singularity. This chapter also contains a proof of Propositions~\ref{prop:dev to data on sing nIX}, \ref{prop:dev to dos BIX} and
\ref{prop:dev to dos iso BI}, proving, roughly speaking, that the Einstein flow generates a diffeomorphism between isometry classes of developments
that induce data on the singularity and isometry classes of initial data on the singularity.

In Chapter~\ref{chapter:k eq zero and minus one cases}, we turn to the $k=0$ and $k=-1$ FLRW solutions. The analysis in the isotropic Bianchi type I
setting turns out to be surprisingly technical. However, there is a simple explanation for this: not all solutions induce data on the singularity.
In other words, it is necessary to deal with exceptional cases. On the other hand, under rather strong assumptions on the potential, it is possible
to verify that there are exactly as many exceptional solutions as one would expect from a crude inspection of the topologies of the sets involved;
see the discussion in Section~\ref{section:k eq zero case} for more details. The analysis in the $k=-1$ FLRW setting is more straightforward, even
though there are exceptional cases in this situation as well. In Chapter~\ref{section:futureasymptotics}, we finally turn to future asymptotics.
In the appendix, we address the problem of parametrising solutions, regular data and data on the singularity. 

\section{Acknowledgements}\label{section:acknoledgements}
The author would like to thank Claes Uggla for comments on an earlier version of the article and Artur Alho for suggesting additional references. 
This research was funded by Vetenskapsr\aa det (the Swedish Research Council), dnr. 2017-03863 and 2022-03053.

\chapter[Parametrisations]{Parametrising solutions, data on the singularity and regular data}\label{chapter:parametrisations}

In this article, we are mainly interested in solutions with asymptotics as described in Theorem~\ref{thm:dataonsingtosolution}. It is therefore
of interest to identify the set of regular initial data leading to such asymptotics. Moreover, we wish to determine the regularity of
the map taking regular initial data to data on the singularity. In order for these questions to make sense, we need to parametrise the set of
isometry classes of regular initial data, as well as the set of isometry classes of data on the singularity, by smooth manifolds. 

\section{Parametrising regular initial data}\label{ssection:par reg in data}

In order to explain how to parametrise regular initial data, let $V\in C^{\infty}(\ro)$ and $(G,\bge,\bk,\bphi_{0},\bphi_{1})\in\mB[V]$. Let, moreover,
$\{e_{i}\}$ be a basis of $\mfg$ which is orthonormal with respect to $\bge$ and let $\bn$ and $K$ be the commutator and Weingarten matrices associated
with $\{e_{i}\}$; see Definition~\ref{def:comm and Wein matr}. Recall that $\bn$ and $K$ satisfy (\ref{seq:compverconstraints}). 
Due to (\ref{eq:macom}), there is a basis $\{e_{i}'\}$ for the Lie algebra which is orthonormal with respect to $\bge$ and such that 
$K$ and $\bn$ are diagonal; see \cite[Corollary~19.14, p.~211]{RinCauchy}. When convenient, we assume the diagonal elements of $\bn$
to have signs as indicated in Table~\ref{table:bianchiA}; this can be ensured by an argument similar to the proof of
\cite[Lemma~19.8, p.~208]{RinCauchy}. Given initial data $(G,\bge,\bk,\bphi_{0},\bphi_{1})$, we thus obtain
$(\bn,K,\bphi_{0},\bphi_{1})$, where $\bn=\rodiag(\bn_{1},\bn_{2},\bn_{3})$ and $K=\rodiag(k_{1},k_{2},k_{3})$. The resulting
$(\bn,K,\bphi_{0},\bphi_{1})$ in general depends on the choice of frame. For instance, in the case of Bianchi type I, $\bn=0$ and permuting the
diagonal elements of $K$ corresponds to permuting the labels of the elements of the frame. Going in the opposite direction, fix
$(\bn,K,\bphi_{0},\bphi_{1})$ satisfying (\ref{eq:nuvhc}). Assume, moreover, that $\bn$ and $K$ are diagonal and that the diagonal elements
of $\bn$ have signs as indicated in Table~\ref{table:bianchiA} corresponding to a $3$-dimensional unimodular Lie group $G$. Then there is a 
frame $\{e_i\}$ of $\mfg$ such that the associated commutator matrix is $\bn$. Given this frame, we define a metric $\bge$ by demanding that
$\{e_i\}$ be an orthonormal frame, and we define $\bk$ by demanding that the Weingarten matrix associated with $\{e_{i}\}$ be $K$. Then
$(G,\bge,\bk,\bphi_{0},\bphi_{1})$ are initial data such that the corresponding commutator and Weingarten matrices with respect to $\{e_{i}\}$
are $\bn$ and $K$ respectively. See Lemma~\ref{lemma:realcond} below for a justification of the above statements. Unfortunately, the resulting
initial data depend on the choice of frame. On the other hand, if $G$ is simply connected, two initial data sets obtained in this way are isometric;
see Lemma~\ref{lemma:nuKsameisometric} below. Next, note that the sum of the first two terms on the left hand side of (\ref{eq:nuvhc}) is the scalar
curvature of $\bge$, say $\bS$; see \cite[(19.6), p.~209]{RinCauchy}. Comparing this polynomial with Table~\ref{table:bianchiA}, it is clear that
$\bS\leq 0$ unless $G$ is of Bianchi type IX. Moreover, $\bS$ can only equal zero if the initial data are of Bianchi type I; isotropic or LRS Bianchi
type VII${}_0$; or Bianchi type IX. However, since we have excluded isotropic and LRS Bianchi type VII${}_0$ initial data, see
Remark~\ref{remark:LRS VIIz data are BI}, $\bS=0$ only occurs in the case of Bianchi types I and IX.

\textit{Maximal hypersurfaces.} In this article, we are not interested in trivial initial data; cf. Definitions~\ref{def:Bianchiid} and
\ref{def:BV} and Remark~\ref{remark:dev of trivial id}. In the case of a non-negative potential $V$ and initial data which are not of Bianchi type
IX, this means that we exclude maximal initial data, i.e. initial data with $\rotr_{\bge}\bk=0$. The reason for this is the following. Assume that
$G$ is not of Bianchi type IX, that $\rotr K=0$ and that $V$ is
non-negative. Then it follows from (\ref{eq:nuvhc}) that $\bS=0$, $\rotr K^2=0$, $\bphi_1^2=0$ and $V(\bphi_0)=0$. This means
that $K=0$, $\bphi_1=0$, $V'(\bphi_0)=0$ (since $V\geq 0$ and $V(\bphi_0)=0$) and that $G$ is of Bianchi type I. In other words, the data are trivial.
Next, we define the sets which will form the basis for the parametrisation of isometry classes of regular initial data. 
\begin{definition}\label{def:degenerate version of id}
  Fix $V\in C^{\infty}(\ro)$. Let $\sfB$
  \index{$\a$Aa@Notation!Symmetry reduced sets of regular initial data!sfB@$\sfB$}%
  be the set of $(\bn,K,\bphi_{0},\bphi_{1})\in M_{3}(\rn{})\times M_{3}(\rn{})\times\rn{2}$ such that (\ref{eq:nuvhc}) holds;
  $\bn=\rodiag(\bn_{1},\bn_{2},\bn_{3})$; $K=\rodiag(k_{1},k_{2},k_{3})$; and, if $\bn=0$ and $K$ is a multiple of the identity, then either $\bphi_1\neq 0$
  or $V'(\bphi_0)\neq 0$.
  Next, let $\sfB_{\mrI}$ denote the subset of $\sfB$ such that $\bn=0$; let $\sfB_{\mrII}$ denote the subset of $\sfB$ such that one of
  the $\bn_i$ is non-zero and the other two vanish; let $\sfB_{\mrVIz}$ denote the subset of $\sfB$ such that two of the $\bn_i$ are non-zero and have
  opposite signs, and the remaining $\bn_i$ vanishes;
  \index{$\a$Aa@Notation!Symmetry reduced sets of regular initial data!sfBT@$\sfB_{\mfT}$}%
  let $\sfB_{\mrVIIz}$ denote the subset of $\sfB$ such that two of the $\bn_i$ are non-zero and have
  the same signs, and the remaining $\bn_i$ vanishes; let $\sfB_{\mrVIII}$ denote the subset of $\sfB$ such that all the $\bn_i$ are non-zero, but they
  do not all have the same sign; and let $\sfB_{\mrIX}$ denote the subset of $\sfB$ such that all the $\bn_i$ are non-zero, and all have the same sign.
\end{definition}
\begin{remark}
  Here $M_{3}(\rn{})$ denotes the set of real valued $3\times 3$-matrices.
\end{remark}
\begin{remark}\label{remark:mfBmfT subm}
  Consider (\ref{eq:nuvhc}). Restricting to the case that $K$ and $\bn$ are diagonal, we can view the set defined by this equation as a subset of
  $\rn{8}$. Calculating the gradient (in $\rn{8}$) of the function obtained by subtracting the expression on the right hand side of (\ref{eq:nuvhc})
  from the left hand side, it can be deduced that if the gradient vanishes, then $\bn=0$, $K=0$, $\bphi_1=0$ and $V'(\bphi_0)=0$. Such points do not
  belong to $\sfB$. It follows that all the sets $\sfB_\mfT$ are manifolds. 
\end{remark}
It is convenient to keep the following observation in mind.
\begin{lemma}\label{lemma:x arising from mfI}
  Let $V\in C^\infty(\ro)$ and $\mfI=(G,\bge,\bk,\bphi_0,\bphi_1)\in\mB_\mfT[V]$, where $\mfT$ is a Bianchi class A type. Then there is an
  orthonormal basis $\{e_i\}$ of $\mfg$ with respect to $\bge$ such that if $\bn$ and $K$ are the commutator and Weingarten matrices associated with
  $\{e_i\}$ and $\mfI$, then $x:=(\bn,K,\bphi_0,\bphi_1)\in\sfB_\mfT$.
\end{lemma}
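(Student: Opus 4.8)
The plan is to build the desired frame in two stages: first use the momentum constraint to simultaneously diagonalise the commutator and Weingarten matrices by an orthonormal change of frame, and then observe that the resulting tuple automatically lands in $\sfB_\mfT$. I would start by fixing an arbitrary basis of $\mfg$ orthonormal with respect to $\bge$ and letting $\bn_0$, $K_0$ be the associated commutator and Weingarten matrices. By the remark following Definition~\ref{def:comm and Wein matr}, the pair $(\bn_0,K_0)$ satisfies (\ref{seq:compverconstraints}); in particular $K_0\bn_0=\bn_0 K_0$ by (\ref{eq:macom}). Since $\bn_0$ is symmetric by construction and $K_0$ is symmetric (being the matrix of the symmetric tensor $\bk$ in an orthonormal frame), and they commute, there is a basis $\{e_i\}$ of $\mfg$ orthonormal with respect to $\bge$ in which both are diagonal; this is exactly \cite[Corollary~19.14, p.~211]{RinCauchy}. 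Writing $\bn=\rodiag(\bn_1,\bn_2,\bn_3)$, $K=\rodiag(k_1,k_2,k_3)$ for the matrices associated with $\{e_i\}$ and setting $x:=(\bn,K,\bphi_0,\bphi_1)$, the relation (\ref{eq:nuvhc}) is inherited from (\ref{seq:compverconstraints}), so $x$ satisfies the constraint in Definition~\ref{def:degenerate version of id}.

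The second step is to pin down which $\sfB_\mfT$ contains $x$. Here I would invoke \cite[Lemma~19.8, p.~208]{RinCauchy}: the sign pattern of the diagonal entries of $\bn$ — up to permutation of the $e_i$ and a simultaneous reversal of all signs (orientation reversal of $\mfg$) — is an invariant of the Lie algebra, and for a group of Bianchi type $\mfT$ it is the one recorded in Table~\ref{table:bianchiA}. Inspecting Definition~\ref{def:degenerate version of id}, the conditions defining $\sfB_\mfT$ are phrased purely in terms of how many of the $\bn_i$ vanish and the relative signs of the nonvanishing ones, and are manifestly invariant under permuting the $\bn_i$ and flipping all their signs at once. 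Hence the sign data already place $x$ in $\sfB_\mfT$. (If one wished to have $\bn$ in the exact normalised form of Table~\ref{table:bianchiA}, one could apply an additional permutation and a single sign flip of a basis vector, both of which are orthogonal transformations preserving diagonality of $K$, but this is not required by the statement.)

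It then remains to check the non-triviality clause in the definition of $\sfB$: if $\bn=0$ and $K$ is a multiple of the identity, then $\bphi_1\neq 0$ or $V'(\bphi_0)\neq 0$. If $\bn=0$ then $\mfg$ is abelian, hence $\bge$ is flat; and if $K=c\,\roId$ then $\bk(e_i,e_j)=c\,\bge(e_i,e_j)$, so $\bk=c\bge$ and thus $3\bk=(\rotr_{\bge}\bk)\bge$. Since $\mfI\in\mB_\mfT[V]\subset\mB[V]$ is non-trivial by Definition~\ref{def:BV}, Definition~\ref{def:Bianchiid} then forces $\bphi_1\neq 0$ or $V'(\bphi_0)\neq 0$, so the clause holds. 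Combining the three steps gives $x\in\sfB_\mfT$.

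The argument is essentially bookkeeping, so I do not expect a serious obstacle; the one point needing a little care is the invariance claim used in the second step, namely reconciling the ``there exists a frame'' formulation of the Bianchi type in Definition~\ref{def:Bianchi Type} with what one actually sees in an orthonormal diagonalising frame. This is precisely what the cited results \cite[Lemma~19.8]{RinCauchy} and \cite[Corollary~19.14]{RinCauchy} are there to supply, so the main effort is citing and applying them correctly rather than proving anything new.
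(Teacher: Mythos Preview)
The proposal is correct and follows essentially the same approach as the paper: both appeal to \cite[Corollary~19.14]{RinCauchy} for simultaneous diagonalisation via the momentum constraint, to \cite[Lemma~19.8]{RinCauchy} for the Bianchi-type identification, and to the exclusion of trivial data from $\mB[V]$ for the non-triviality clause. Your write-up simply unwinds in more detail what the paper compresses into a reference to ``the observations made at the beginning of the present section''.
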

\begin{remark}\label{remark:x arising from mfI}
  If $x$ and $\mfI$ are related as in the statement of the lemma, we below say that $x$ arises from $\mfI$.
\end{remark}
\begin{proof}
  That there is an orthonormal basis $\{e_i\}$ of the desired form follows from the observations made at the beginning of the present section,
  keeping in mind that trivial initial data are excluded from $\mB[V]$ and characterised by $\bn=0$ (since we have excluded isotropic and LRS Bianchi
  type $\mrVIIz$ data), $K=\theta\roId/3$ (for some $\theta\in\ro$), $\bphi_1=0$ and $V'(\bphi_0)=0$. 
\end{proof}
Note that permuting the basis elements in $\{e_i\}$ (cf. Lemma~\ref{lemma:x arising from mfI}) transforms $x\in\sfB_\mfT$ to another element of
$\sfB_\mfT$. Similarly, changing $e_i$ to $-e_i$ for one $i\in\{1,2,3\}$ transforms $x\in\sfB_\mfT$ to another element of $\sfB_\mfT$. These operations
can be summarised by the maps
\[
\psi_\sigma^+:\sfB\rightarrow\sfB,\ \ \
\psi_\sigma^-:\sfB-\sfB_\mrI\rightarrow\sfB-\sfB_\mrI
\]
for $\sigma\in S_3$, defined as follows:
\begin{equation}\label{eq:psi sigma pm}
    \psi_{\sigma}^{\pm}(\bn_1,\bn_2,\bn_3,k_1,k_2,k_3,\bphi_0,\bphi_1) := 
    (\pm\bn_{\sigma(1)},\pm\bn_{\sigma(2)},\pm\bn_{\sigma(3)},k_{\sigma(1)},k_{\sigma(2)},k_{\sigma(3)},\bphi_0,\bphi_1);
\end{equation}
\index{$\a$Aa@Notation!Maps!psisigmapm@$\psi_{\sigma}^{\pm}$}%
recall that $S_3$ denotes the set of bijections from the set $\{1,2,3\}$ to itself. Here we tacitly think of 
$\sfB$ as a subset of $\rn{8}$. The map $\psi_{\sigma}^+$ ($\psi_{\sigma}^-$) corresponds to permuting the basis vectors according to $\sigma$ and then
flipping the sign of one of the basis vectors in case $\sigma$ is an odd (even) permutation. The reason we exclude $\sfB_\mrI$ from the domain of
definition of $\psi_\sigma^-$ is that if we do not, then $\psi_\sigma^-(x)=\psi_\sigma^+(x)$ for all $x\in\sfB_\mrI$ and $\sigma\in S_3$; i.e., $\psi_\sigma^-$
would then be redundant in the Bianchi type I setting, which causes unnecessary complications. 

Let $\Gamma$
\index{$\a$Aa@Notation!Symmetry groups!Gamma@$\Gamma$}%
denote the group consisting of the elements
$\psi_\sigma^\pm$. It is natural to quotient $\sfB_\mfT$ by this group. However, the result need not be a manifold. This is due to the fact that
non-trivial maps of the form $\psi_\sigma^\pm$ can have fixed points; in order to obtain a manifold, we would like the group action by which we quotient
to be free. In order to
identify the fixed points of non-trivial maps of the form $\psi_\sigma^\pm$, note that there are three types of elements of $S_3$; $\sigma\in S_3$
can have three fixed points ($\sigma=\mathrm{Id}$), one fixed point (there are $\{i,j,k\}=\{1,2,3\}$ such that $\sigma$ fixes $i$ and swaps
$j$ and $k$; i.e., $\sigma$ is a transposition interchanging $j$ and $k$) or no fixed points (in which case $\sigma$ can be written as a product of
two transpositions).
\begin{definition}\label{def:Gamma plus etc}
  Let $\Gamma^{\roev}$
  \index{$\a$Aa@Notation!Symmetry groups!Gammaev@$\Gamma^{\roev}$}%
  ($\Gamma^{\roodd}$)
  \index{$\a$Aa@Notation!Symmetry groups!Gammaodd@$\Gamma^{\roodd}$}%
  consist of the set of $\psi_\sigma^\pm$ such that $\sigma$ is even (odd). Let $\Gamma^{\pm}$
  \index{$\a$Aa@Notation!Symmetry groups!Gammapm@$\Gamma^{\pm}$}%
  be the subset of
  $\Gamma$ consisting of the maps $\psi_\sigma^\pm$ for $\sigma\in S_3$. Finally, let $\Gamma^{+,\roev}:=\Gamma^+\cap\Gamma^{\roev}$
  \index{$\a$Aa@Notation!Symmetry groups!Gammapmev@$\Gamma^{\pm,\roev}$, $\Gamma^{\pm,\roodd}$}%
  etc. 
\end{definition}
\begin{remark}
  Clearly, $\Gamma^{\roev}$ is a subgroup of $\Gamma$, but $\Gamma^{\roodd}$ is not. Moreover, $\Gamma^+$ is a subgroup of $\Gamma$, but $\Gamma^-$ is not.
\end{remark}
Next, we determine the fixed points of maps of the form $\psi_\sigma^\pm$. The map $\psi_{\roId}^+$ is of course the identity. The map $\psi_{\roId}^-$ does
not have any fixed points (since $\sfB_\mrI$ does not belong to the domain of definition of $\psi_{\roId}^-$). Next, let us consider the case of even and
non-trivial $\sigma\in S_3$.

\begin{lemma}\label{lemma:psi sigma plus isotropic}
  Let $V\in C^\infty(\ro)$ and assume $\sigma\in S_3$ to have no fixed points (i.e., $\sigma$ is even and non-trivial). Then $\psi_\sigma^-$
  has no fixed points. Moreover, if $x=(\bn_1,\bn_2,\bn_3,k_1,k_2,k_3,\bphi_0,\bphi_1)$ and $\psi_\sigma^+(x)=x$, then all the $\bn_i$ equal and
  all the $k_i$ equal, so that $\psi_\lambda^+(x)=x$ for all $\lambda\in S_3$. In particular, if $x\in\sfB_\mfT$ arises from $\mfI\in\mB_\mfT[V]$, cf.
  Remark~\ref{remark:x arising from mfI}, and is a fixed point of $\psi_\sigma^+$, then $\mfI$ are isotropic. Finally, let $\mfI\in\mB_{\mfT}^{\iso}[V]$
  and assume that $x\in\sfB$ arises from $\mfI$, cf. Remark~\ref{remark:x arising from mfI}. Then $\psi_\lambda^+(x)=x$ for all $\lambda\in S_3$.
\end{lemma}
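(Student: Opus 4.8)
The statement splits into three parts, and the plan is to dispatch them using the explicit coordinate description of $\psi_\sigma^\pm$ in \eqref{eq:psi sigma pm} together with the characterisation of which $(\bn,K,\bphi_0,\bphi_1)$ arise from isotropic data. Write $x=(\bn_1,\bn_2,\bn_3,k_1,k_2,k_3,\bphi_0,\bphi_1)$ and recall that, by hypothesis, $\sigma$ has no fixed point, so $\sigma$ is a $3$-cycle; write it (after relabelling) as $1\mapsto 2\mapsto 3\mapsto 1$.

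\emph{Step 1: $\psi_\sigma^-$ has no fixed point.} Since $\sigma$ has no fixed point it is even, so $\psi_\sigma^-$ is genuinely the ``minus'' map, defined only on $\sfB-\sfB_\mrI$. If $\psi_\sigma^-(x)=x$ then, comparing the $\bn$-entries, $-\bn_{\sigma(i)}=\bn_i$ for $i=1,2,3$; chasing this around the $3$-cycle gives $\bn_1=-\bn_2=\bn_3=-\bn_1$, hence $\bn_1=\bn_2=\bn_3=0$, i.e. $x\in\sfB_\mrI$, which is outside the domain of $\psi_\sigma^-$. Contradiction.

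\emph{Step 2: fixed points of $\psi_\sigma^+$ are totally symmetric.} If $\psi_\sigma^+(x)=x$, then comparing entries directly gives $\bn_{\sigma(i)}=\bn_i$ and $k_{\sigma(i)}=k_i$ for all $i$; since $\sigma$ is a $3$-cycle on $\{1,2,3\}$, the orbit is all of $\{1,2,3\}$, so $\bn_1=\bn_2=\bn_3=:\bn_0$ and $k_1=k_2=k_3=:k_0$. For such an $x$, every $\psi_\lambda^+$ (which only permutes the indices, possibly with an overall sign flip, but the sign flip is irrelevant when all three $\bn_i$ are equal — wait, one must be slightly careful: for odd $\lambda$, $\psi_\lambda^+$ flips the sign of one $\bn$-entry). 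Here the key point is that $x\in\sfB$ forces the signs of the $\bn_i$ to be as in Table~\ref{table:bianchiA}, and the only row of that table in which all three $\bn_i$ may be simultaneously equal and nonzero is type IX (all $+$); if they are all zero we are in type I. In the type IX case $\bn_0>0$ and flipping one sign would produce a pattern $(-,+,+)$ not equal to $x$, so in fact for a fixed point of $\psi_\sigma^+$ we must have $\bn_0=0$, i.e. $x\in\sfB_\mrI$. Either way, once $\bn=0$, $\psi_\lambda^+(x)=x$ for every $\lambda\in S_3$ (permuting equal entries changes nothing, and there is no sign issue since $\bn=0$); if $\bn\ne 0$ (type IX) there is no fixed point of $\psi_\sigma^+$ at all. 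Hence every fixed point of $\psi_\sigma^+$ satisfies $\psi_\lambda^+(x)=x$ for all $\lambda\in S_3$, and in particular, such an $x$ has $\bn=0$ (so $G$ is of Bianchi type I) and $K=k_0\roId$, i.e. $K$ is a multiple of the identity; if $x$ arises from $\mfI\in\mB_\mfT[V]$ then $\mfI$ has flat metric and $\bk$ a multiple of $\bge$, so by Definition~\ref{def:isotropic data} $\mfI$ is isotropic.

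\emph{Step 3: isotropic data give totally symmetric $x$.} Conversely, let $\mfI\in\mB_\mfT^{\iso}[V]$ and let $x\in\sfB$ arise from $\mfI$ via some orthonormal frame $\{e_i\}$. By Remark~\ref{remark:iso VIIz data are BI} (and the exclusion of isotropic VII${}_0$ data) the only Bianchi types admitting isotropic data in $\mB[V]$ are I and IX. In either case the Ricci tensor of $\bge$ is a multiple of $\bge$; since $\bge$ is left invariant with commutator matrix $\bn=\rodiag(\bn_1,\bn_2,\bn_3)$ and the Ricci tensor in the frame $\{e_i\}$ is a diagonal quadratic expression in the $\bn_i$ (see \cite[(19.6), p.~209]{RinCauchy} and the surrounding discussion), the condition $\roRic=c\,\bge$ forces $\bn_1=\bn_2=\bn_3$ (for type IX this is the round metric; for type I all vanish). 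Likewise $\bk$ a multiple of $\bge$ gives $k_1=k_2=k_3$. Hence $x$ is totally symmetric and $\psi_\lambda^+(x)=x$ for all $\lambda\in S_3$.

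\emph{Main obstacle.} The routine parts are the index-chasing in Steps 1–2. The one place that needs genuine care is the sign bookkeeping in Step 2: one must argue that a fixed point of an \emph{even} $\psi_\sigma^+$ cannot live in the type IX stratum (because there the forced signs are all $+$ and there is no room, once we also insist $\psi_\lambda^+$ for odd $\lambda$ fix it, or more simply because a $3$-cycle acting on three equal positive numbers is already a fixed point, so the content is rather that "fixed point of $\psi_\sigma^+$" $\Rightarrow$ all $\bn_i$ equal, and then one reads off isotropy from the equality of the $k_i$'s, not from any sign constraint). So the real work is Step 3, namely the linear-algebra fact that the left-invariant metric on a unimodular Bianchi type I or IX Lie group has isotropic Ricci tensor only when its commutator matrix is a scalar — this is standard (it is the statement that the only isotropic such geometries are flat $\rn{3}$ and the round $S^3$) and can be cited from \cite[Section~19]{RinCauchy}.
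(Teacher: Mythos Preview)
Step 2 contains a genuine error stemming from a misreading of (\ref{eq:psi sigma pm}): the map $\psi_\lambda^+$ applies the $+$ sign to \emph{every} $\bn$-entry regardless of the parity of $\lambda$; there is no sign flip. (The sentence after (\ref{eq:psi sigma pm}) about ``flipping the sign of one basis vector'' explains how the map arises geometrically from orthonormal-frame changes, not the coordinate formula itself.) Once you have shown that all $\bn_i$ are equal and all $k_i$ are equal, the conclusion $\psi_\lambda^+(x)=x$ for every $\lambda\in S_3$ is immediate: permuting equal entries does nothing. Your subsequent claim that such an $x$ must have $\bn=0$ is therefore wrong: the point $x=(\bn_0,\bn_0,\bn_0,k_0,k_0,k_0,\bphi_0,\bphi_1)$ with $\bn_0\neq 0$ lies in $\sfB_{\mrIX}$, is fixed by every $\psi_\lambda^+$, and corresponds to the round metric on $\mathrm{SU}(2)$, not a flat one. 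The correct route to isotropy of the associated $\mfI$---and the one the paper takes---is the Ricci formula (\ref{eq:Riccidiagitocm iso}): when all $\bn_a$ are equal, all $\bR_a=\tfrac12\bn_0^2$ are equal, so the Ricci tensor of $\bge$ is a multiple of $\bge$; combined with $K=k_0\roId$ this is precisely Definition~\ref{def:isotropic data}.

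Steps 1 and 3 are essentially correct and match the paper. In Step 3, however, you should make explicit that (\ref{eq:Riccidiagitocm iso}) admits a second way for all $\bR_a$ to be equal---two of the $\bn_a$ equal and nonzero with the third vanishing---and that this is exactly the isotropic Bianchi type VII${}_0$ case excluded from $\mB[V]$ by Definition~\ref{def:BV}; the paper argues this directly from the formula rather than first restricting to types I and IX as you do.
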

\begin{proof}
  By assumption, there are $\{i,j,l\}=\{1,2,3\}$ such that
  $\sigma(i)=j$, $\sigma(j)=l$ and $\sigma(l)=i$. If $\psi_\sigma^-(x)=x$, then $\bn_i=-\bn_j$, $\bn_j=-\bn_l$ and $\bn_l=-\bn_i$. This means that
  $\bn=0$, so that $x\in\sfB_\mrI$; i.e., $x$ does not belong to the domain of definition of $\psi_\sigma^-$. Thus the map $\psi_\sigma^-$ does not have
  any fixed points. If $\psi_\sigma^+(x)=x$, then $\bn_j=\bn_i$, $\bn_l=\bn_j$ and $\bn_i=\bn_l$ so that all the $\bn_m$ equal. Similarly, all the $k_m$
  equal. Next, let $\mfI=(G,\bge,\bk,\bphi_0,\bphi_1)\in\mB_\mfT[V]$ and $\{e_a\}$ be an orthonormal basis of $\mfg$ with respect to $\bge$ such
  that if $\bn$ and $K$ are the commutator and Weingarten matrices associated with $\{e_a\}$ and $\mfI$, then $x:=(\bn,K,\bphi_0,\bphi_1)\in\sfB_\mfT$.
  Assume, moreover, that $x$ is a fixed point of $\psi_\sigma^+$ (so that all the $\bn_m$ equal and all the $k_m$ equal). Then $\bk$ is a multiple of
  $\bge$. In order to prove that the Ricci tensor of $\bge$ is a multiple of $\bge$, note that if $\bR_{ab}:=\overline{\mathrm{Ric}}(e_a,e_b)$, where
  $\overline{\mathrm{Ric}}$ denotes the Ricci tensor of $\bge$, then $\bR_{ab}=0$ for $a\neq b$. Moreover, denoting the diagonal components by $\bR_a$,
  then, if $\{a,b,c\}=\{1,2,3\}$, 
  \begin{equation}\label{eq:Riccidiagitocm iso}
    \bR_{a}:=-\tfrac{1}{2}(\bn_{b}-\bn_{c})^{2}+\tfrac{1}{2}\bn_{a}^{2}
  \end{equation}
  (no summation on any index). These statements are justified in \cite[Lemma~19.11, p.~209]{RinCauchy}. Since all the $\bn_a$ equal, it follows from
  (\ref{eq:Riccidiagitocm iso}) that all the $\bR_a$ equal. Thus the Ricci tensor of $\bge$ is a multiple of $\bge$, and we conclude that
  $\mfI$ are isotropic. Finally, let $\mfI=(G,\bge,\bk,\bphi_0,\bphi_1)\in\mB_{\mfT}^{\iso}[V]$ and $\{e_a\}$ be an orthonormal basis of $\mfg$ with
  respect to $\bge$ such that if $\bn$ and $K$ are the commutator and Weingarten matrices associated with $\{e_a\}$ and $\mfI$, then
  $x=(\bn,K,\bphi_0,\bphi_1)\in\sfB_\mfT$. Then all the $k_m$ equal; see Definition~\ref{def:isotropic data}. In order to prove that all the $\bn_m$
  equal, note that (\ref{eq:Riccidiagitocm iso}) holds in this case as well. That all the $\bR_a$ equal follows from Definition~\ref{def:isotropic data}.
  There are only two possibilities
  for all the $\bR_a$ to equal: either all the $\bn_a$ equal, or two of the $\bn_a$ equal and the third vanishes. In the latter case, $\mfI$ are of
  isotropic Bianchi type VII${}_0$. However, that possibility has been excluded in the definition of $\mB[V]$; see Definition~\ref{def:BV}. Thus all
  the $\bn_a$ equal, and the last statement of the lemma follows. 
\end{proof}

\begin{lemma}\label{lemma:sfB char per symm VIz}
  Let $V\in C^\infty(\ro)$, $\sigma\in S_3$ be a transposition and $x\in\sfB-\sfB_\mrI$ be a fixed point of $\psi_\sigma^-$. Then $x\in\sfB_{\mrVIz}$.
  In fact, there are $\{i,j,l\}=\{1,2,3\}$ such that if $x=(\bn_1,\bn_2,\bn_3,k_1,k_2,k_3,\bphi_0,\bphi_1)$, then $\bn_i=0$, $\bn_j=-\bn_l\neq 0$
  and $k_j=k_l$. In particular, if $x\in\sfB_{\mrVIz}-\sfB_\mrI$ arises from $\mfI\in\mB_{\mrVIz}[V]$, cf.
  Remark~\ref{remark:x arising from mfI}, and is a fixed point of $\psi_\sigma^-$, then $\mfI$ are permutation symmetric. Finally, let
  $\mfI\in\mB_{\mrVIz}^{\roper}[V]$. Then there is an $x\in\sfB_{\mrVIz}$ arising from $\mfI$, cf. Remark~\ref{remark:x arising from mfI}, and a
  transposition $\sigma$ such that $\psi_\sigma^-(x)=x$.
\end{lemma}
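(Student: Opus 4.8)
The plan is to handle the three assertions in turn: the containment $x\in\sfB_{\mrVIz}$ and the explicit description of $x$ come out of directly unwinding $\psi_\sigma^-(x)=x$; the ``in particular'' clause is obtained by relabelling a frame and exhibiting a Lie algebra automorphism; and the converse is a short spectral argument about the commutator and Weingarten matrices. First I would write $\sigma$ as the transposition fixing some $i\in\{1,2,3\}$ and interchanging the remaining indices $j,l$. Equating $\bn$-components in $\psi_\sigma^-(x)=x$ gives $-\bn_i=\bn_i$ and $-\bn_l=\bn_j$, hence $\bn_i=0$ and $\bn_j=-\bn_l$, while equating $k$-components gives $k_j=k_l$ (the $i$-th $k$-component carries no information since $\sigma(i)=i$, and $\psi_\sigma^-$ fixes $\bphi_0,\bphi_1$). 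Since $x\in\sfB-\sfB_\mrI$ we have $\bn\neq 0$, so $\bn_j=-\bn_l\neq 0$; thus exactly one of $\bn_1,\bn_2,\bn_3$ vanishes and the remaining two are non-zero of opposite sign, which is precisely the condition defining $\sfB_{\mrVIz}$. This simultaneously records the asserted description of $x$.

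For the ``in particular'' clause, suppose $x\in\sfB_{\mrVIz}-\sfB_\mrI$ arises from $\mfI\in\mB_{\mrVIz}[V]$ through an orthonormal basis $\{e_a\}$ of $\mfg$ with respect to $\bge$, with $\psi_\sigma^-(x)=x$. Permuting $\{e_a\}$ so that the vanishing $\bn$-entry moves to the first slot produces an orthonormal basis $\{f_a\}$ whose commutator matrix is still diagonal, of the form $\rodiag(0,\bn_2',\bn_3')$ with $\bn_2'=-\bn_3'\neq 0$, and whose Weingarten matrix is still diagonal, of the form $\rodiag(k_1',k_2',k_2')$; here I use that a permutation of an orthonormal frame conjugates the Weingarten matrix and conjugates the commutator matrix up to an overall sign, so diagonality is preserved (cf. the transformation rules in \cite{RinCauchy}). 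Define the linear map $\Psi:\mfg\to\mfg$ by $\Psi f_1=f_1$, $\Psi f_2=f_3$, $\Psi f_3=f_2$. Using $[f_1,f_2]=\bn_3'f_3$, $[f_1,f_3]=-\bn_2'f_2$, $[f_2,f_3]=0$ and $\bn_2'=-\bn_3'$, a check on these three brackets shows $\Psi$ is a Lie algebra automorphism; and since the Weingarten matrix in $\{f_a\}$ is diagonal with equal second and third entries, $\bk(\Psi X,\Psi Y)=\bk(X,Y)$ for all $X,Y\in\mfg$. As $\mfI$ is of Bianchi type $\mrVIz$, Definition~\ref{def:permutation symmetry} yields that $\mfI$ is permutation symmetric.

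For the converse, let $\mfI=(G,\bge,\bk,\bphi_0,\bphi_1)\in\mB_{\mrVIz}^{\roper}[V]$, and let $\{e_a\}$ and $\Psi$ be as in Definition~\ref{def:permutation symmetry}. Let $A$ be the matrix of $\Psi$ in $\{e_a\}$, an orthogonal involution with $\det A=-1$, and let $\bn$ and $K$ be the (symmetric) commutator and Weingarten matrices associated with $\{e_a\}$. Since $\Psi$ is a Lie algebra automorphism, the transformation rule for commutator matrices (cf. \cite{RinCauchy}) gives $A\bn A^{T}=(\det A)\bn=-\bn$, hence $\bn A=-A\bn$; since $\bk$ is $\Psi$-invariant, $AKA=K$, hence $AK=KA$; and $(\ref{eq:macom})$ gives $\bn K=K\bn$. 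Because $\mfI$ is of Bianchi type $\mrVIz$, the symmetric matrix $\bn$ has three distinct eigenvalues, of signs $0,+,-$; the relation $\bn A=-A\bn$ then forces the eigenvalues to be $0,\lambda,-\lambda$ with $\lambda>0$, forces $A$ to preserve the $0$-eigenline, and makes $A$ carry the $\lambda$-eigenline onto the $(-\lambda)$-eigenline. Since $K$ commutes with $\bn$ and $\bn$ has simple spectrum, $K$ is diagonal in any orthonormal basis $u_0,u_+,u_-$ of eigenvectors of $\bn$ (with eigenvalues $0,\lambda,-\lambda$ respectively), and applying $AK=KA$ to $u_+$ shows the $\lambda$- and $(-\lambda)$-diagonal entries of $K$ coincide. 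In the orthonormal basis $\{v_a\}:=\{u_0,u_+,u_-\}$ of $\mfg$, the commutator matrix is then $\rodiag(0,\pm\lambda,\mp\lambda)$ and the Weingarten matrix is $\rodiag(k_1,k_2,k_2)$, so the resulting element $x$ lies in $\sfB_{\mrVIz}$, arises from $\mfI$, and satisfies $\bn_1=0$, $\bn_2=-\bn_3$, $k_2=k_3$; hence $\psi_{(23)}^-(x)=x$, by the computation of the second paragraph read in reverse.

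The step I expect to be the main obstacle is the converse, and within it the careful bookkeeping of the two transformation laws: the commutator matrix changes under an orthonormal change of frame with an extra factor $\det$ (so that for the orientation-reversing $\Psi$ the defining relation is $A\bn A^{T}=-\bn$, and so that the eigenvalue set of the commutator matrix is genuinely the sign pattern $0,+,-$ in every orthonormal frame), whereas the Weingarten matrix changes by plain conjugation. One should also note that the sign ambiguities in the choice of $u_0,u_+,u_-$ (and in $Au_+=\pm u_-$) are irrelevant, since the conclusion $\psi_{(23)}^-(x)=x$ depends only on $\bn_1=0$, $\bn_2=-\bn_3$ and $k_2=k_3$.
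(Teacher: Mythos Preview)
Your proof is correct. The first two parts match the paper's approach closely: both unwind $\psi_\sigma^-(x)=x$ directly, and both verify permutation symmetry by exhibiting the swap map $\Psi$ on a suitably relabelled frame.

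For the converse, you and the paper diverge. The paper works componentwise in the frame $\{e_i'\}$ supplied by Definition~\ref{def:permutation symmetry}: from the explicit form of $\Psi$ it reads off $\bn^{11}=0$, $\bn^{13}=-\bn^{12}$, $\bn^{23}=0$, $\bn^{33}=-\bn^{22}$ (and analogous relations for $\bk$), then computes an eigenvector of $\bn$ for the zero eigenvalue in terms of $\bn^{12},\bn^{22}$ and checks by hand that $K$ restricts to a multiple of the identity on its orthogonal complement. Your route is more structural: the three relations $A\bn A^{T}=-\bn$, $AKA=K$, $K\bn=\bn K$ together with the fact that the Bianchi~$\mrVIz$ commutator matrix has simple spectrum with sign pattern $0,+,-$ force the eigenvalues to be $0,\lambda,-\lambda$, force $A$ to swap the $\pm\lambda$-eigenlines, and then $AK=KA$ immediately yields $\kappa_+=\kappa_-$. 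This avoids the explicit eigenvector computation the paper leaves to the reader, and makes transparent why the argument works; the paper's approach, by contrast, produces the concrete frame directly without appealing to the spectral theorem.
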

\begin{proof}
  Let $\{i,j,l\}=\{1,2,3\}$ be such that $\sigma$ fixes $i$ and swaps $j$ and $l$. A fixed point $x$ of $\psi_\sigma^-$ is then such that
  $\bn_i=-\bn_i$, $\bn_j=-\bn_l$ and $k_j=k_l$. Thus $\bn_i=0$. Moreover, since $x\notin\sfB_\mrI$,
  $\bn_j$ and $\bn_l$ have to be non-zero and have opposite signs. This means that if $x\in\sfB-\sfB_\mrI$ is a
  fixed point of $\psi_\sigma^-$, then $x\in\sfB_{\mrVIz}$. Moreover, if $x\in\sfB-\sfB_\mrI$ arises from $\mfI\in\mB[V]$ and is a fixed point of
  $\psi_\sigma^-$, then $\mfI$ are permutation symmetric initial data, see Definition~\ref{def:permutation symmetry}. Finally, let
  $\mfI=(G,\bge,\bk,\bphi_0,\bphi_1)\in\mB_{\mrVIz}^{\roper}[V]$. We then wish to prove that there is an $x$ as in the last statement of the lemma.
  Let, to this end, $\{e_{i}'\}$ be an orthonormal basis of the type assumed to exist in
  Definition~\ref{def:permutation symmetry}. Let $K$ and $\bn$ be the corresponding Weingarten and commutator matrices. Due to the properties of
  $\Psi$, see Definition~\ref{def:permutation symmetry}, $\bn^{11}=0$, $\bn^{13}=-\bn^{12}$, $\bn^{23}=0$ and $\bn^{33}=-\bn^{22}$. This means that there
  are only two independent components of $\bn$, namely $\bn^{12}$ and $\bn^{22}$.
  Similarly $\bk_{12}=\bk_{13}$ and $\bk_{22}=\bk_{33}$. The eigenvalues
  of $\bn$ are $\lambda_0=0$, $\lambda_+$ and $\lambda_-=-\lambda_+$. Since $\bn\neq 0$, we can assume $\lambda_+$ to satisfy $\lambda_+>0$.
  Due to the symmetry of $\bn$, there are orthonormal eigenvectors $v_0$, $v_+$ and $v_-$ corresponding to $\lambda_0$, $\lambda_+$ and $\lambda_-$
  respectively. Due to the momentum constraint, the matrices $\bn$ and $K$ have to commute; see (\ref{eq:macom}). This means that $v_0$ and $v_\pm$
  are eigenvectors of $K$ with eigenvalues $\kappa_0$ and $\kappa_\pm$ respectively. In order to verify that $\kappa_+=\kappa_-$, it is convenient to
  first determine an eigenvector of $\bn$ corresponding to $\lambda_0$ in terms of $\bn^{12}$ and $\bn^{22}$ (we leave the details to the reader). Then
  one can verify that $K$ is a multiple of the identity on the orthogonal space (in order to do so, it is useful to appeal to the already mentioned
  relations between the coefficients as well as the relations arising from the fact that $K$ and $\bn$ commute). To conclude, there is an
  orthonormal frame $\{\be_{i}\}$ of $\mfg$ with respect to $\bge$ such that if $\nu$ and $\bK$ are the corresponding commutator and Weingarten matrices,
  then $\nu=\mathrm{diag}(0,\nu_0,-\nu_0)$ for a $0<\nu_0\in\rn{}$. Moreover, $K=\mathrm{diag}(p_1,p_2,p_2)$. The last statement of the lemma follows. 
\end{proof}

\begin{lemma}\label{lemma:sfB char per symm VIIz}
  Let $V\in C^\infty(\ro)$, $\sigma\in S_3$ be a transposition and $x\in\sfB$ be a fixed point of $\psi_\sigma^+$. Then there are
  $\{i,j,l\}=\{1,2,3\}$ such that if $x=(\bn_1,\bn_2,\bn_3,k_1,k_2,k_3,\bphi_0,\bphi_1)$, then $\bn_j=\bn_l$ and $k_j=k_l$. In particular, if
  $x\in\sfB_{\mfT}$ arises from $\mfI\in\mB_{\mfT}[V]$, cf. Remark~\ref{remark:x arising from mfI}, is a fixed point of $\psi_\sigma^+$, but is not
  a fixed point of all the maps in $\Gamma^+$, then $\mfI$ are LRS. Finally, let $\mfI\in\mB_{\mfT}^{\roLRS}[V]$. Then there is an
  $x\in\sfB_{\mfT}$ arising from $\mfI$, cf. Remark~\ref{remark:x arising from mfI}, which is not a fixed point of all the maps in $\Gamma^+$,
  and a transposition $\sigma$ such that $\psi_\sigma^+(x)=x$.
\end{lemma}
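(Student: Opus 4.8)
The plan is to prove the three assertions in turn, in parallel with the proofs of Lemmas~\ref{lemma:psi sigma plus isotropic} and \ref{lemma:sfB char per symm VIz}. For the first assertion one argues directly from the definition of $\psi_\sigma^+$: writing $\{i,j,l\}=\{1,2,3\}$ with $\sigma$ fixing $i$ and interchanging $j$ and $l$, the identity $\psi_\sigma^+(x)=x$ is, componentwise, $\bn_m=\bn_{\sigma(m)}$ and $k_m=k_{\sigma(m)}$ for $m=1,2,3$ (there is no sign change), and evaluating at $m=j$ gives $\bn_j=\bn_l$ and $k_j=k_l$.

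For the second assertion, let $x\in\sfB_\mfT$ arise from $\mfI=(G,\bge,\bk,\bphi_0,\bphi_1)\in\mB_\mfT[V]$ through a $\bge$-orthonormal frame $\{e_a\}$ of $\mfg$, so $\bn=\rodiag(\bn_1,\bn_2,\bn_3)$ and $K=\rodiag(k_1,k_2,k_3)$ are the associated commutator and Weingarten matrices. Applying a suitable $\psi_\lambda^+$ — which is induced by a change of the orthonormal frame and hence does not disturb the hypotheses, and which conjugates $\sigma$ to another transposition — I would reduce to the case $\sigma=(2\,3)$, so that by the first part $\bn_2=\bn_3$ and $k_2=k_3$. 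Defining $\Psi_t$ by (\ref{eq:Psit definition}) using this frame, the main step is to check that $\Psi_t$ is a Lie algebra automorphism for every $t$: from $\g^k_{ij}=\e_{ijl}\bn^{lk}$ with $\bn$ diagonal one reads off $[e_2,e_3]=\bn_1e_1$, $[e_3,e_1]=\bn_2e_2$, $[e_1,e_2]=\bn_3e_3$, and a brief bilinear computation then shows $[\Psi_tX,\Psi_tY]=\Psi_t[X,Y]$ on all basis pairs, the only condition arising being exactly $\bn_2=\bn_3$. Since $k_2=k_3$, the matrix of $\bk$ in $\{e_a\}$ is $\rodiag(k_1,k_2,k_2)$, which is invariant under the $(e_2,e_3)$-rotation $\Psi_t$, so $\bk(\Psi_tX,\Psi_tY)=\bk(X,Y)$ for all $X,Y$ and $t$. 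Finally $\mfI$ is not isotropic: if it were, the last statement of Lemma~\ref{lemma:psi sigma plus isotropic} would give $\psi_\lambda^+(x)=x$ for every $\lambda\in S_3$, contradicting the assumption that $x$ is not fixed by all of $\Gamma^+$. Hence $\mfI$ meets all the conditions of Definition~\ref{def:LRS} and is LRS.

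For the third assertion, let $\mfI=(G,\bge,\bk,\bphi_0,\bphi_1)\in\mB_\mfT^{\roLRS}[V]$ and fix a $\bge$-orthonormal frame $\{e_a\}$ of $\mfg$ and a family $\Psi_t$ of Lie algebra isomorphisms of the form (\ref{eq:Psit definition}) with $\bk(\Psi_tX,\Psi_tY)=\bk(X,Y)$, as guaranteed by Definition~\ref{def:LRS}; let $\bn$ and $K$ be the associated commutator and Weingarten matrices. I would first show that $\bn=\rodiag(\bn_1,\bn_2,\bn_2)$ and $K=\rodiag(k_1,k_2,k_2)$. Differentiating $\bk(\Psi_tX,\Psi_tY)=\bk(X,Y)$ at $t=0$ gives $\bk(AX,Y)+\bk(X,AY)=0$, where $A$ is the generator of $\Psi_t$, determined by $Ae_1=0$, $Ae_2=e_3$, $Ae_3=-e_2$; testing on basis vectors yields $\bk(e_1,e_2)=\bk(e_1,e_3)=\bk(e_2,e_3)=0$ and $\bk(e_2,e_2)=\bk(e_3,e_3)$, the stated form of $K$. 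For $\bn$, differentiating the automorphism identity at $t=0$ shows that $A$ is a derivation of $\mfg$; applying $A[X,Y]=[AX,Y]+[X,AY]$ to $(e_2,e_3)$ places $[e_2,e_3]$ in $\ker A=\mrspan\{e_1\}$, which through $\g^k_{ij}=\e_{ijl}\bn^{lk}$ and the symmetry of $\bn$ forces $\bn^{12}=\bn^{13}=0$, and applying it to $(e_1,e_2)$ then forces $\bn^{23}=0$ and $\bn^{22}=\bn^{33}$. Consequently $x:=(\bn,K,\bphi_0,\bphi_1)\in\sfB$; since the diagonal entries of $\bn$ realise the sign pattern of the Bianchi type of $\mfI$, in fact $x\in\sfB_\mfT$, and it arises from $\mfI$. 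Clearly $\psi_{(2\,3)}^+(x)=x$, and $x$ is not fixed by all of $\Gamma^+$ — otherwise Lemma~\ref{lemma:psi sigma plus isotropic} would force $\mfI$ to be isotropic, contradicting $\mfI\in\mB_\mfT^{\roLRS}[V]$.

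I expect the only genuinely delicate point to be the derivation computation in the third assertion: extracting from "$\Psi_t$ is an automorphism for all $t$" the fact that the a priori arbitrary symmetric matrix $\bn$ must be diagonal with $\bn_2=\bn_3$. An alternative, shorter route is to invoke the transformation law for commutator matrices under $SO(3)$ — such a rotation is an automorphism precisely when it conjugates $\bn$ to itself — and differentiate to get $[A,\bn]=0$, which combined with the symmetry of $\bn$ gives the same conclusion; but the self-contained structure-constant argument above is the one I would present.
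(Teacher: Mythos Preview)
Your proposal is correct and follows essentially the same approach as the paper. The paper's proof is considerably terser---for the second assertion it simply asserts that $\bn_j=\bn_l$ and $k_j=k_l$ force $\mfI$ to be isotropic or LRS ``see Definition~\ref{def:LRS}'', and for the third it states that ``the assumptions of Definition~\ref{def:LRS} imply that $K$ and $\bn$ are diagonal'' with $\bn_2=\bn_3$, $k_2=k_3$---whereas you spell out the verification that $\Psi_t$ is a Lie algebra automorphism and the infinitesimal derivation computation showing diagonality; these are precisely the details the paper suppresses.
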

\begin{remark}\label{remark:sigma but not lambda}
  If $x$ is as in the last statement of the lemma, then there is no transposition, say $\lambda$, different from $\sigma$ such that
  $\psi_\lambda^+(x)=x$. In order to prove this statement, assume that there is such a $\lambda$. Then $\lambda\circ\sigma$ is a non-trivial even
  element of $S_3$ under which $x$ is invariant. Due to Lemma~\ref{lemma:psi sigma plus isotropic}, this means that $x$ is invariant under $\Gamma^+$,
  in contradiction with the assumptions. 
\end{remark}
\begin{proof}
  Let $\{i,j,l\}=\{1,2,3\}$ be such that $\sigma$ fixes $i$ and swaps $j$ and $l$. Then $\bn_j=\bn_l$ and $k_j=k_l$. This means that if $x\in\sfB$ arises
  from $\mfI\in\mB[V]$ and is a fixed point of $\psi_\sigma^+$, then $\mfI$ are either isotropic or LRS, see Definition~\ref{def:LRS}. If $\mfI$ are
  isotropic, then $x$ is a fixed point of all the maps in $\Gamma^+$; see Lemma~\ref{lemma:psi sigma plus isotropic}. Thus, if $x\in\sfB$ arises from
  $\mfI\in\mB[V]$, cf. Remark~\ref{remark:x arising from mfI}, is a fixed point of $\psi_\sigma^+$ but is not a fixed point of all the maps in $\Gamma^+$,
  then $\mfI$ are LRS. Assume now that $\mfI\in\mB_{\mfT}^{\roLRS}[V]$. Let $\{e_{i}\}$ be an orthonormal
  basis of the type assumed to exist in Definition~\ref{def:LRS}. Let $K$ and $\bn$ be the corresponding Weingarten and commutator matrices. Then the
  assumptions of Definition~\ref{def:LRS} imply that $K$ and $\bn$ are diagonal. Moreover, if $k_i$ and $\bn_i$ denote the diagonal components of $K$
  and $\bn$ respectively, then $\bn_2=\bn_3$ and $k_2=k_3$. In particular, if $x=(\bn_1,\bn_2,\bn_3,k_1,k_2,k_3,\bphi_0,\bphi_1)$, then there is a
  transposition $\sigma$ such that $\psi_\sigma^+(x)=x$. Finally, since $\mfI$ are not isotropic, $x$ is not a fixed point of all the maps in $\Gamma^+$.
\end{proof}
\begin{definition}\label{def:BTs GTs}
  Fix $V\in C^\infty(\ro)$. Let $\mfT$ be a Bianchi class A type. Define $\sfB_\mfT^{\iso}$ to be the subset of $\sfB_\mfT$ consisting of points
  invariant under $\Gamma^+$. Define $\sfB_\mrI^{\roper}=\varnothing$ and define, for $\mfT\neq\mrI$, $\sfB_{\mfT}^{\roper}$ to be the set of $p\in\sfB_{\mfT}$
  such that there is a map $\psi_\sigma^-\in \Gamma^{-,\roodd}$ with $\psi_\sigma^-(p)=p$. Define 
  $\sfB_{\mfT}^{\roLRS}$ to be the set of $p\in\sfB_{\mfT}-\sfB_{\mfT}^{\iso}$ such that there is a map $\psi_\sigma^+\in \Gamma^{+,\roodd}$ with
  $\psi_\sigma^+(p)=p$. Define
  \[
  \sfB_{\mfT}^{\rogen}:=\sfB_{\mfT}-\sfB_{\mfT}^{\iso}-\sfB_{\mfT}^{\roper}-\sfB_{\mfT}^{\roLRS}.
  \]
  \index{$\a$Aa@Notation!Symmetry reduced sets of regular initial data!sfBTs@$\sfB_\mfT^\mfs$}%
  Define $\Gamma_\mrI^{\iso}:=\{\roId\}$ and $\Gamma_{\mrIX}^{\iso}:=\{\roId,\psi^{-}_{\roId}\}$. Define $\Gamma_{\mrVIz}^{\roper}:=\Gamma^{\roev}$,
  $\Gamma_{\mrI}^{\roLRS}:=\Gamma^{+,\roev}$ and $\Gamma_{\mfT}^{\roLRS}:=\Gamma^{\roev}$ for $\mfT\notin \{\mrI,\mrVIz\}$. Finally, define
  $\Gamma_{\mrI}^{\rogen}:=\Gamma^+$ and $\Gamma_\mfT^{\rogen}:=\Gamma$ for $\mfT\neq\mrI$.
  \index{$\a$Aa@Notation!Symmetry groups!GammaTs@$\Gamma_\mfT^{\mfs}$}%
\end{definition}
\begin{remark}
  Note that $\sfB_\mfT^{\iso}$ is empty unless $\mfT\in \{\mrI,\mrIX\}$; $\sfB_\mfT^{\roper}$ is empty unless $\mfT=\mrVIz$;
  and $\sfB_{\mrVIz}^{\roLRS}$ is empty.
\end{remark}
Due to the observations made prior to the statement of the definition we obtain the following conclusions.
\begin{lemma}\label{lemma:sfR mfT mfs}
  Fix $V\in C^\infty(\ro)$. Let $\mfT$ be a Bianchi class A type and $\mfs\in\{\iso,\roper,\roLRS,\rogen\}$ be such that $\sfB_{\mfT}^{\mfs}$ and
  $\Gamma_{\mfT}^{\mfs}$ are defined in Definition~\ref{def:BTs GTs} and are non-empty. Then $\Gamma_{\mfT}^{\mfs}$ acts freely and properly discontinuously on
  $\sfB_{\mfT}^{\mfs}$ and the corresponding quotient is smooth and denoted by $\sfR_{\mfT}^{\mfs}$.
    \index{$\a$Aa@Notation!Symmetry reduced sets of regular initial data!sfRTs@$\sfR_\mfT^\mfs$}%
\end{lemma}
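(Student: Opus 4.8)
The plan is to reduce the statement to the quotient manifold theorem for finite group actions, after a few preparatory observations about $\Gamma$ and about the sets $\sfB_\mfT^\mfs$. First I would record that $\Gamma$ is finite: since permuting the triple $(\bn_1,\bn_2,\bn_3)$ commutes with simultaneously negating all three of its entries, the map $\psi_{\roId}^-$ is central of order two, $\Gamma^+=\{\psi_\sigma^+:\sigma\in S_3\}$ is a subgroup isomorphic to $S_3$, and $\Gamma=\Gamma^+\times\langle\psi_{\roId}^-\rangle$; in particular $\Gamma^+$ is normal in $\Gamma$, and conjugation in $\Gamma$ acts on the $S_3$-factor by ordinary $S_3$-conjugation (hence preserves cycle type) while fixing the $\pm$-label. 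Each $\psi_\sigma^\pm$ restricts to a diffeomorphism of $\sfB_\mfT$: the left-hand side of (\ref{eq:nuvhc}) is built from $\rotr\bn^2$, $(\rotr\bn)^2$, $\rotr K^2$ and $(\rotr K)^2$, all invariant under permuting the diagonal entries and under $\bn\mapsto-\bn$; the non-triviality clause of Definition~\ref{def:degenerate version of id} is preserved; and the Bianchi type, being determined by the multiset $\{\sgn\bn_1,\sgn\bn_2,\sgn\bn_3\}$ up to an overall sign reversal, is preserved as well.

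Next I would verify that each $\sfB_\mfT^\mfs$ is a smooth submanifold of $\sfB_\mfT$ on which $\Gamma_\mfT^\mfs$ acts. For $\mfs=\rogen$ this is immediate: $\sfB_\mfT^{\iso}\cup\sfB_\mfT^{\roLRS}$ is, by the definition of $\sfB_\mfT^{\roLRS}$ together with the fact that an isotropic point is fixed by all of $\Gamma^+$, exactly the set of points of $\sfB_\mfT$ fixed by some $\psi_\sigma^+\in\Gamma^{+,\roodd}$, and $\sfB_\mfT^{\roper}$ is the set of points fixed by some $\psi_\sigma^-\in\Gamma^{-,\roodd}$; both are finite unions of closed sets, so $\sfB_\mfT^{\rogen}$ is open in $\sfB_\mfT$, hence a manifold, and it is $\Gamma$-invariant since each of those loci is the common fixed set of a conjugation-invariant family of elements of $\Gamma$. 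For $\mfs\in\{\iso,\roLRS,\roper\}$, Lemmas~\ref{lemma:psi sigma plus isotropic}, \ref{lemma:sfB char per symm VIz} and \ref{lemma:sfB char per symm VIIz} identify $\sfB_\mfT^\mfs$ with (an open subset of) a finite disjoint union of intersections of $\sfB_\mfT$ with coordinate linear subspaces of $\rn{8}$, such as $\{\bn_j=\bn_l,\ k_j=k_l\}$ in the LRS case; two distinct pieces are disjoint because a point lying in both would be fixed by a non-trivial even element of $\Gamma^+$, hence isotropic by Lemma~\ref{lemma:psi sigma plus isotropic}, and that each intersection is a submanifold follows from a gradient computation exactly as in Remark~\ref{remark:mfBmfT subm} — the gradient of the restriction of (\ref{eq:nuvhc}) to the subspace vanishes only where $\bn=0$, $K$ is a multiple of the identity, $\bphi_1=0$ and $V'(\bphi_0)=0$, which is excluded from $\sfB$. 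Invariance of $\sfB_\mfT^\mfs$ under $\Gamma_\mfT^\mfs$ again follows from conjugation preserving cycle type and the $\pm$-label, so that $\Gamma^{\roev}$ and $\Gamma^{+,\roev}$ carry transposition maps to transposition maps, $\Gamma^+$ to itself, and so on.

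Freeness is then a short case check. Suppose $g\in\Gamma_\mfT^\mfs$, $g\neq\roId$, fixes some $p\in\sfB_\mfT^\mfs$. If $g=\psi_\sigma^-$ with $\sigma$ even (that is, $\sigma=\roId$ or a $3$-cycle), then $\psi_\sigma^-(p)=p$ forces $\bn_{\sigma(i)}=-\bn_i$ for all $i$ and hence $\bn=0$, so $p\in\sfB_\mrI$; but no group $\Gamma_\mrI^\mfs$ contains such a $g$, and for $\mfT\neq\mrI$ one has $\sfB_\mfT\cap\sfB_\mrI=\varnothing$. If $g=\psi_\sigma^+$ with $\sigma$ a $3$-cycle, Lemma~\ref{lemma:psi sigma plus isotropic} shows $p$ is fixed by all of $\Gamma^+$, so $p\in\sfB_\mfT^{\iso}$, which is incompatible with $\mfs\in\{\roper,\roLRS,\rogen\}$, while $\Gamma_\mfT^{\iso}$ contains no $3$-cycle map. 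If $g=\psi_\sigma^+$ with $\sigma$ a transposition — which occurs only for $\mfs=\rogen$ — then by the definition of $\sfB_\mfT^{\roLRS}$ we get $p\in\sfB_\mfT^{\iso}\cup\sfB_\mfT^{\roLRS}$, contradicting $p\in\sfB_\mfT^{\rogen}$. Finally, if $g=\psi_\sigma^-$ with $\sigma$ a transposition — again only for $\mfs=\rogen$ — Lemma~\ref{lemma:sfB char per symm VIz} gives $p\in\sfB_\mrVIz^{\roper}$, contradicting $p\in\sfB_\mfT^{\rogen}$. Hence $\Gamma_\mfT^\mfs$ acts freely.

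Since $\Gamma_\mfT^\mfs$ is a finite group acting smoothly and freely on the Hausdorff, second-countable smooth manifold $\sfB_\mfT^\mfs$, the action is properly discontinuous — Hausdorffness separates the finite orbit of each point, and shrinking produces a neighbourhood meeting each of its translates only under $\roId$ — so by the quotient manifold theorem the quotient $\sfR_\mfT^\mfs:=\sfB_\mfT^\mfs/\Gamma_\mfT^\mfs$ is a smooth manifold, with the projection a smooth covering. The main obstacle is not any single deep step but the uniform bookkeeping over all admissible pairs $(\mfT,\mfs)$; the content that keeps the freeness and submanifold checks short is precisely the identification of the relevant fixed-point sets accomplished in Lemmas~\ref{lemma:psi sigma plus isotropic}, \ref{lemma:sfB char per symm VIz} and \ref{lemma:sfB char per symm VIIz}.
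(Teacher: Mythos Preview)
Your proposal is correct and follows essentially the same approach as the paper's proof: verify that each $\sfB_\mfT^\mfs$ is a smooth manifold (via the gradient computation of Remark~\ref{remark:mfBmfT subm}), use Lemmas~\ref{lemma:psi sigma plus isotropic}, \ref{lemma:sfB char per symm VIz} and \ref{lemma:sfB char per symm VIIz} to classify the fixed points of the non-trivial elements of $\Gamma$ and conclude that none lie in the stratum $\sfB_\mfT^\mfs$ on which $\Gamma_\mfT^\mfs$ acts, and then invoke the quotient manifold theorem for a finite group acting freely. The paper's own argument is much terser---it simply cites the three lemmas and asserts the conclusion---whereas you supply the explicit case check for freeness, the verification that $\Gamma_\mfT^\mfs$ preserves $\sfB_\mfT^\mfs$ (via conjugation preserving cycle type and the $\pm$-label), and the observation that the LRS pieces are disjoint (your argument here is precisely the content of Remark~\ref{remark:sigma but not lambda}); these are all points the paper leaves implicit.
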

\begin{proof}
  By an argument similar to Remark~\ref{remark:mfBmfT subm}, the sets $\sfB_\mfT^\mfs$ are smooth manifolds. Next, note that if $\sigma$ has no fixed points
  (i.e., $\sigma$ is a non-trivial even element of $S_3$), then $\psi_\sigma^-$ has no fixed points and $\psi_\sigma^+$ only fixes isotropic elements; see
  Lemma~\ref{lemma:psi sigma plus isotropic}. If $\sigma$ is odd, then $\sigma$ is a transposition. Moreover, if $\psi_\sigma^-(x)=x$, then $x\in\sfB_{\mrVIz}^{\roper}$,
  and if $\psi_\sigma^+(x)=x$, then $x\in \sfB_\mfT^{\roLRS}\cup\sfB_\mfT^{\iso}$; see Lemmas~\ref{lemma:sfB char per symm VIz} and \ref{lemma:sfB char per symm VIIz}. Combining these
  observations with Definition~\ref{def:BTs GTs}, it is clear that $\Gamma_{\mfT}^{\mfs}$ is a finite group which acts on $\sfB_{\mfT}^{\mfs}$ without fixed points.
  Thus $\Gamma_{\mfT}^{\mfs}$ acts freely and properly discontinuously on $\sfB_{\mfT}^{\mfs}$, so that the quotient is smooth.
\end{proof}
Finally, we are in a position to parametrise ${}^{\rosc}\mfB_\mfT^\mfs[V]$ by smooth manifolds.
\begin{lemma}\label{lemma:sc mfB mfT mfs param}
  Let $V\in C^\infty(\ro)$. Then there is a bijection from ${}^{\rosc}\mfB_\mfT^\mfs[V]$ to $\sfR_{\mfT}^{\mfs}$ for the following combinations of
  $\mfT$ and $\mfs$: if $\mfs=\iso$ and $\mfT\in\{\mrI,\mrIX\}$; if $\mfs=\roLRS$ and $\mfT\in \{\mrI,\mrII,\mrVIII,\mrIX\}$; if
  $(\mfT,\mfs)=(\mrVIz,\roper)$; and if $\mfs=\rogen$ and $\mfT$ is an arbitrary Bianchi class A type.
\end{lemma}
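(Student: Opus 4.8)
\emph{Plan of proof.} The idea is to exploit the correspondence developed in the present section between an element $\mfI\in\mB_\mfT[V]$ and the reduced data $x=(\bn,K,\bphi_0,\bphi_1)\in\sfB_\mfT$ \emph{arising from} $\mfI$ (Remark~\ref{remark:x arising from mfI}), and then to show that, after restricting to the symmetry stratum $\sfB_\mfT^\mfs$ and passing to the $\Gamma_\mfT^\mfs$-quotient, all remaining ambiguities — the choice of representative within the isometry class and the choice of adapted orthonormal frame — are exactly accounted for. The argument is carried out case by case over the permitted pairs $(\mfT,\mfs)$, but in a uniform fashion, the input being Lemmas~\ref{lemma:x arising from mfI}, \ref{lemma:psi sigma plus isotropic}, \ref{lemma:sfB char per symm VIz}, \ref{lemma:sfB char per symm VIIz}, \ref{lemma:sfR mfT mfs} together with the realisation and rigidity statements in Lemmas~\ref{lemma:realcond} and \ref{lemma:nuKsameisometric}.

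\emph{Construction of the map.} Given $[\mfI]\in{}^{\rosc}\mfB_\mfT^\mfs[V]$, pick a simply connected representative $\mfI\in{}^{\rosc}\mB_\mfT^\mfs[V]$. By Lemma~\ref{lemma:x arising from mfI} there is an $x\in\sfB_\mfT$ arising from $\mfI$. If $\mfs\in\{\iso,\roper,\roLRS\}$, the last statements of Lemmas~\ref{lemma:psi sigma plus isotropic}, \ref{lemma:sfB char per symm VIz} and \ref{lemma:sfB char per symm VIIz} respectively allow the adapted frame to be chosen so that in fact $x\in\sfB_\mfT^\mfs$; if $\mfs=\rogen$, the ``in particular'' clauses of the same three lemmas show that \emph{any} $x$ arising from $\mfI$ already lies in $\sfB_\mfT\setminus(\sfB_\mfT^\iso\cup\sfB_\mfT^\roper\cup\sfB_\mfT^\roLRS)=\sfB_\mfT^\rogen$, since membership in one of the excluded strata would force $\mfI$ to be isotropic, permutation symmetric or LRS. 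Set $\Theta([\mfI])$ to be the $\Gamma_\mfT^\mfs$-orbit of $x$ in $\sfR_\mfT^\mfs$.

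\emph{Well-definedness and bijectivity.} If $x,x'\in\sfB_\mfT^\mfs$ both arise from $\mfI$, then $x'=g\,x$ for some $g\in\Gamma$, since the two adapted frames differ by a signed permutation preserving diagonality of $\bn,K$ and the Bianchi type; the crucial fact, extracted from the stabiliser analysis in Lemmas~\ref{lemma:psi sigma plus isotropic}--\ref{lemma:sfB char per symm VIIz}, is that $\Gamma=\Gamma_\mfT^\mfs\cdot\mathrm{Stab}_\Gamma(x)$ at every $x\in\sfB_\mfT^\mfs$ (for instance in the LRS case $\mathrm{Stab}_\Gamma(x)=\{\roId,\psi_\sigma^+\}$ with $\sigma$ the distinguished transposition, disjoint from $\Gamma^\roev=\Gamma_\mfT^\roLRS$), so that $[x']=[x]$ and, more generally, the $\Gamma$-orbit of a point of $\sfB_\mfT^\mfs$ equals its $\Gamma_\mfT^\mfs$-orbit. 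If $\mfI'$ is isometric to $\mfI$, pulling an adapted frame for $\mfI'$ back along the isometry gives an adapted frame for $\mfI$ with the same reduced data, so $\Theta([\mfI'])=\Theta([\mfI])$; here one also uses that each $\sfB_\mfT^{\mfs'}$ is $\Gamma$-invariant, which follows from normality of $\Gamma^+$ and $\Gamma^\roev$ in $\Gamma$. For surjectivity, given a $\Gamma_\mfT^\mfs$-orbit pick $x\in\sfB_\mfT^\mfs$; by Lemma~\ref{lemma:realcond} there are simply connected initial data $\mfI$ on the simply connected Lie group of type $\mfT$ from which $x$ arises, and the ``in particular'' clauses of Lemmas~\ref{lemma:psi sigma plus isotropic}--\ref{lemma:sfB char per symm VIIz}, together with the $\Gamma$-invariance just noted, yield $\mfI\in\mB_\mfT^\mfs[V]$ and $\Theta([\mfI])=[x]$. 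For injectivity, if $\Theta([\mfI_0])=\Theta([\mfI_1])$, choose adapted frames producing $x_0,x_1\in\sfB_\mfT^\mfs$ with $x_1\in\Gamma_\mfT^\mfs\,x_0$; as each element of $\Gamma_\mfT^\mfs$ is realised by a frame change, $x_1$ also arises from $\mfI_0$, so $\mfI_0$ and $\mfI_1$ have the same commutator and Weingarten matrices and scalar-field data with respect to suitable orthonormal frames, whence they are isometric by Lemma~\ref{lemma:nuKsameisometric} (both groups being simply connected). Thus $\Theta$ is a bijection onto $\sfR_\mfT^\mfs$.

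\emph{Main obstacle.} The step requiring the most care is the identity $\Gamma=\Gamma_\mfT^\mfs\cdot\mathrm{Stab}_\Gamma(x)$ on $\sfB_\mfT^\mfs$: one must verify, for each admissible $(\mfT,\mfs)$ and at \emph{every} point of the stratum — including the exceptional behaviour on $\sfB_\mrI$, where the maps $\psi_\sigma^-$ are undefined, and the type IX isotropic stratum, where $\Gamma_{\mrIX}^\iso=\{\roId,\psi_\roId^-\}$ is non-trivial — that the finite group $\Gamma_\mfT^\mfs$ of Definition~\ref{def:BTs GTs} is simultaneously a complement to the generic $\Gamma$-stabiliser and acts freely, the latter being exactly Lemma~\ref{lemma:sfR mfT mfs}. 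The rest is routine bookkeeping with the three symmetry-type lemmas.
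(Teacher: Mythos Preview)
Your overall strategy matches the paper's, and the decomposition $\Gamma=\Gamma_\mfT^\mfs\cdot\mathrm{Stab}_\Gamma(x)$ is a clean organising principle. However, there are two genuine gaps.

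First, the assertion that ``the two adapted frames differ by a signed permutation'' is false for $\mfs\in\{\iso,\roLRS\}$: when $k_2=k_3$ and $\bn_2=\bn_3$, any rotation in the $(e_2,e_3)$-plane sends one adapted frame to another. Your \emph{conclusion} $x'=gx$ for some $g\in\Gamma$ is nevertheless correct (under such a rotation the data transform by $\psi_{\roId}^\pm$), but establishing it in general requires a case analysis of orthogonal frame changes preserving diagonality of both $\bn$ and $K$. This is precisely the content of Lemma~\ref{lemma:isometryinducesequalnuK}, which the paper invokes directly (and which in fact yields $g\in\Gamma_\mfT^\mfs$, making the stabiliser decomposition unnecessary).

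Second, and more substantively, your surjectivity argument is incomplete at $(\mfT,\mfs)=(\mrVIIz,\rogen)$. Given $x\in\sfB_{\mrVIIz}^{\rogen}$, Lemma~\ref{lemma:realcond} produces data $\mfI$ of type $\mrVIIz$, but you must still verify $\mfI\in\mB[V]$ --- in particular that $\mfI$ is not LRS Bianchi type $\mrVIIz$ (excluded by Definition~\ref{def:BV}). Your intended route (if $\mfI$ were LRS, obtain $x'\in\sfB_{\mrVIIz}^{\roLRS}$ from the LRS frame and contradict $\Gamma$-invariance of the strata) needs $x$ and $x'$ to lie in the same $\Gamma$-orbit; but Lemma~\ref{lemma:isometryinducesequalnuK} assumes $\mfI\in\mB[V]$ in its statement and uses the exclusion of LRS $\mrVIIz$ in its proof, so invoking it here is circular. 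The ``finally'' clauses of Lemmas~\ref{lemma:psi sigma plus isotropic}--\ref{lemma:sfB char per symm VIIz} likewise presuppose $\mfI\in\mB[V]$. The paper closes this gap by a direct computation: if $\mfI$ were LRS or isotropic $\mrVIIz$, the Ricci tensor of $\bge$ would vanish, forcing (via (\ref{eq:Riccidiagitocm iso})) one $\bn_i=0$ and the other two equal in the frame giving $x$; a further comparison with the putative LRS frame then forces all $\bn_i=0$, contradicting type $\mrVIIz$.
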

\begin{remark}\label{remark:sc mfB mfT mfs param}
  Combining the bijection obtained in the lemma with Lemma~\ref{lemma:sfR mfT mfs} yields a smooth structure on ${}^{\rosc}\mfB_\mfT^\mfs[V]$
  for the relevant combinations of $\mfT$ and $\mfs$. In what follows, we assume ${}^{\rosc}\mfB_\mfT^\mfs[V]$ to have been endowed with this
  smooth structure. 
\end{remark}
\begin{remark}\label{remark:eta BTs realised}
  One immediate consequence of the proof is that if $\eta\in \sfB_\mfT^\mfs$, with $\mfT$ and $\mfs$ as in the statement of the lemma, then
  there are $\mfI\in{}^{\rosc}\mB_\mfT^\mfs[V]$ such that $\eta$ arises from $\mfI$. Moreover, we can, with each $\mfI\in{}^{\rosc}\mB_\mfT^\mfs[V]$
  associate an $x\in \sfB_\mfT^\mfs$ so that $x$ arises from $\mfI$. 
\end{remark}
\begin{proof}
  Let $(\mfT,\mfs)$ be one of the combinations mentioned in the statement of the lemma and $\mfI\in{}^{\rosc}\mB_\mfT^\mfs[V]$. Then there is an
  $x\in\sfB_{\mfT}$ arising from $\mfI$; cf. Lemma~\ref{lemma:x arising from mfI} and Remark~\ref{remark:x arising from mfI}. If $\mfs=\iso$, then
  $x$ is invariant under $\Gamma^+$ due to Lemma~\ref{lemma:psi sigma plus isotropic}. This means that $x\in\sfB_\mfT^{\iso}$; cf.
  Definition~\ref{def:BTs GTs}. Next, if $\mfs=\roper$, then we can assume $x$ to belong to
  $\sfB_{\mrVIz}^{\roper}$ due to Lemma~\ref{lemma:sfB char per symm VIz} and Definition~\ref{def:BTs GTs}. If $\mfs=\roLRS$, then we can
  assume $x$ to be such that it is not invariant under all the maps in $\Gamma^+$, but such that there is a transposition $\sigma$ such that
  $\psi_\sigma^+(x)=x$; see Lemma~\ref{lemma:sfB char per symm VIIz}. This means that $x\in\sfB_{\mfT}^{\roLRS}$; see Definition~\ref{def:BTs GTs}.
  Finally, let $\mfs=\rogen$. We then wish to prove that $x\in \sfB_\mfT^\mfs$. Assume, to this end, that $x\in\sfB_\mfT^{\mfr}$ for some
  $\mfr\neq\rogen$. If $\mfr=\iso$, then $\mfs=\iso$ due to Lemma~\ref{lemma:psi sigma plus isotropic}, a contradiction. If $\mfr=\roper$, then
  $\mfs=\roper$ due to Lemma~\ref{lemma:sfB char per symm VIz}, a contradiction. If $\mfr=\roLRS$, then $\mfs=\roLRS$ due to
  Lemma~\ref{lemma:sfB char per symm VIIz}, a contradiction. To conclude: we can, with each $\mfI\in{}^{\rosc}\mB_\mfT^\mfs[V]$ associate an
  $x\in \sfB_\mfT^\mfs$, which, in its turn, gives rise to an element, say $[x]$, of $\sfR_\mfT^\mfs$; see Lemma~\ref{lemma:sfR mfT mfs}.
  If $x_i\in\sfB_\mfT^\mfs$, $i=1,2$, are both associated with $\mfI$ in this way, then $[x_1]=[x_2]$ due to Lemma~\ref{lemma:isometryinducesequalnuK}.
  In other words, there is a well defined map from ${}^{\rosc}\mB_\mfT^\mfs[V]$ to $\sfR_\mfT^\mfs$. Moreover, again due to
  Lemma~\ref{lemma:isometryinducesequalnuK}, this map descends to a map from ${}^{\rosc}\mfB_\mfT^\mfs[V]$ to $\sfR_\mfT^\mfs$. Next, if two initial data
  sets, say $\mfI_0$ and $\mfI_1$, are mapped to the same element, then $\mfI_0$ and $\mfI_1$ are isometric; see Lemma~\ref{lemma:nuKsameisometric}.
  Thus the map from ${}^{\rosc}\mfB_\mfT^\mfs[V]$ to $\sfR_\mfT^\mfs$ is well defined and injective. What remains is to prove surjectivity.

  In order to prove surjectivity, let $[x]\in\sfR_\mfT^\mfs$ for some $x\in\sfB_\mfT^\mfs$. Due to Lemma~\ref{lemma:realcond}, there are non-trivial simply
  connected regular initial data $\mfI=(G,\bge,\bk,\bphi_0,\bphi_1)$ and a basis $\{e_i\}$ of $\mfg$, orthonormal with respect
  to $\bge$, such that if $\bn$ and $K$ are the commutator and Weingarten matrices associated with $\{e_i\}$ and $\mfI$, then $x=(\bn,K,\bphi_0,\bphi_1)$.
  We need to prove that $\mfI\in{}^{\rosc}\mB_\mfT^\mfs[V]$. All that remains to be demonstrated is that $\mfI$ are of symmetry type $\mfs$ and that $\mfI$
  are neither of isotropic nor of LRS Bianchi type $\mrVIIz$; see Definition~\ref{def:BV}. Assume, for the moment, that $\mfI$ are neither of isotropic
  nor of LRS Bianchi type $\mrVIIz$. Then we know that $\mfI\in{}^{\rosc}\mB_\mfT[V]$. This means that $\mfI\in{}^{\rosc}\mB_\mfT^{\mfr}[V]$ for some symmetry
  type $\mfr$. By the arguments presented above, this means that there is a basis $\{e_i'\}$ of $\mfg$, orthonormal with respect to $\bge$, such that if
  $\bn'$ and $K'$ are the commutator and Weingarten matrices associated with $\{e_i'\}$ and $\mfI$, then $x'=(\bn',K',\bphi_0,\bphi_1)\in\sfB_{\mfT}^{\mfr}$.
  Combining these observations with Lemma~\ref{lemma:isometryinducesequalnuK} and the fact that $\sfB_{\mfT}^{\mfr}$ and $\sfB_{\mfT}^{\mfs}$ are disjoint if
  $\mfs\neq\mfr$, it follows that $\mfr=\mfs$. In other words, $\mfI\in{}^{\rosc}\mB_\mfT^\mfs[V]$. This proves surjectivity, with one caveat: we need to
  exclude the possibility that $\mfI$ are isotropic or LRS Bianchi type $\mrVIIz$. Assume, to this end, that they are. Then the Ricci tensor of $\bge$
  vanishes identically; see Lemma~\ref{lemma:LRS BVIIz is BI}. This means that $\bn$ has to be such that $\bn_i=0$ and $\bn_j=\bn_l$ for some $i,j,l$ such
  that $\{i,j,l\}=\{1,2,3\}$; see (\ref{eq:Riccidiagitocm iso}). If $k_j=k_l$, then $x\in\sfB_{\mrVIIz}^{\mfs}$ for $\mfs\in\{\iso,\roLRS\}$, in
  contradiction with the assumptions. Thus $k_j\neq k_l$. In particular, the initial data are not isotropic. If $k_i\notin\{k_j,k_l\}$, the initial data
  cannot be LRS either. Assume therefore, without loss of generality, that $k_2=k_3\neq k_1$, that $\bn_2=0$ and that $\bn_3=\bn_1$. In order for $\mfI$
  to be LRS, there must be an orthonormal basis $\{e_m'\}$ of $\mfg$ with respect to $\bge$ and a family $\Psi_t$ of Lie algebra isomorphisms such that
  the conditions of Definition~\ref{def:LRS} hold with $\{e_i\}$
  replaced by $\{e_i'\}$. Let $K'$ and $\bn'$ be the Weingarten and commutator matrices associated with $\{e_i'\}$ and $\mfI$. Then, due to the
  the conditions of Definition~\ref{def:LRS}, $K'$ and $\bn'$ are diagonal. Moreover, if $k_i'$ and $\bn_i'$ denote the diagonal
  components of $K'$ and $\bn'$ respectively, then $\bn_2'=\bn_3'$ and $k_2'=k_3'$. This means, in particular, that $k_A=k_A'$ for $A\in\{2,3\}$ and
  that $k_1=k_1'$. Thus $e_1'=\pm e_1$. Since the initial data are of Bianchi type $\mrVIIz$ and $\bn_2'=\bn_3'$, we also have to have $\bn_2'\neq 0$
  and $\bn_1'=0$. Since the plane spanned by $e_2'$ and $e_3'$ has to coincide with the plane spanned by $e_2$ and $e_3$ (they are both the orthogonal
  complement of $e_1'=\pm e_1$), there is also a $t_0$ such that $\Psi_{t_0}e_2'=e_2$ and $\Psi_{t_0}e_3'=\pm e_3$. This means that
  \[
  0=[e_2',\pm e_3']=[\Psi_{t_0}e_2',\pm\Psi_{t_0}e_3']=[e_2,e_3]=\bn_1 e_1.
  \]
  Thus $\bn_1=0$. Since $\bn_2=0$ and $\bn_3=\bn_1$, we conclude that the initial data are of Bianchi type I, in contradiction with the assumptions.
  To conclude, $\mfI$ are neither of isotropic nor of LRS Bianchi type $\mrVIIz$. This completes the proof of surjectivity. 
\end{proof}

\section{Parametrising developments}\label{ssection:param solns}

We introduce the terminology ${}^{\rosc}\mfD_{\mfT}^{\mfs}[V]$ etc. in Definition~\ref{def:mDV}. In order to prove that these sets can be
parametrised by smooth manifolds, it is useful to first focus on sets of isometry classes of initial data with a fixed mean curvature. 
\begin{definition}\label{def:B varthetaz}
  Fix $V\in C^\infty(\ro)$ and $\vartheta_0\in\ro$. Let $\mfs\in\{\iso,\roper,\roLRS,\rogen\}$ and $\mfT$ be a Bianchi class A type
  such that $\sfB_{\mfT}^{\mfs}$ and $\Gamma_{\mfT}^{\mfs}$ are defined in Definition~\ref{def:BTs GTs} and non-empty. Using the notation introduced
  in Lemma~\ref{lemma:sfR mfT mfs}, let $\sfB_{\mfT}^{\mfs}(\vartheta_0)$ and $\sfR_\mfT^\mfs(\vartheta_0)$
  \index{$\a$Aa@Notation!Symmetry reduced sets of regular initial data!sfRTsthz@$\sfR_\mfT^\mfs(\vartheta_0)$}%
  \index{$\a$Aa@Notation!Symmetry reduced sets of regular initial data!sfBTsthz@$\sfB_\mfT^\mfs(\vartheta_0)$}%
  be the subsets of $\sfB_{\mfT}^{\mfs}$ and $\sfR_\mfT^\mfs$ respectively with $\tr K=\vartheta_0$.
  Moreover, let $\sfR_{\mrI,\rond}^{\iso}(\vartheta_0)$
  \index{$\a$Aa@Notation!Symmetry reduced sets of regular initial data!sfRIndisothz@$\sfR_{\mrI,\rond}^{\iso}(\vartheta_0)$}%
  be the subset of $\sfR_{\mrI}^{\iso}$ with $\tr K=\vartheta_0$ and $\bphi_1\neq 0$.
\end{definition}
In order to analyse the regularity of these sets, we begin by reformulating (\ref{eq:nuvhc}). Let
\begin{equation}\label{eq:ki ito kp km}
  k_1=\tfrac{1}{3}\theta-\tfrac{2}{3}k_+,\ \ \
  k_2=\tfrac{1}{3}\theta+\tfrac{1}{3}k_++\tfrac{1}{\sqrt{3}}k_-,\ \ \
  k_3=\tfrac{1}{3}\theta+\tfrac{1}{3}k_+-\tfrac{1}{\sqrt{3}}k_-,
\end{equation}
where $\theta=\tr K$. With this notation, (\ref{eq:nuvhc}) can be written
\begin{equation}\label{eq:ham cons fixed theta}
  \theta^2=k_+^2+k_-^2+\tfrac{3}{4}(\bn_1^2+\bn_2^2+\bn_3^2-2\bn_1\bn_2-2\bn_2\bn_3-2\bn_3\bn_1)+\tfrac{3}{2}\bphi_1^2+3V(\bphi_0).
\end{equation}
Fixing the left hand side of (\ref{eq:ham cons fixed theta}) to be $\vartheta_0^2$ for some $\vartheta_0\in\ro$, the only situation in which
the gradient of the right hand side vanishes is when $(\mfT,\mfs)=(\mrI,\iso)$, $\bphi_1=0$, $V'(\bphi_0)=0$ and $\vartheta_0^2=3V(\bphi_0)$; i.e.,
if the initial data are trivial, see Definition~\ref{def:Bianchiid} and Remark~\ref{remark:dev of trivial id}. However, this possibility has
been excluded from $\sfB_\mfT^\mfs$, see Definition~\ref{def:degenerate version of id}. This observation leads to the following conclusion.
\begin{lemma}\label{lemma:sfRmfTmfsvartheta sm mfd}
  Fix $V\in C^\infty(\ro)$, $\vartheta_0\in\ro$ and $(\mfT,\mfs)$ as in Definition~\ref{def:B varthetaz}. Then $\sfR_\mfT^\mfs(\vartheta_0)$ is a
  smooth manifold.
\end{lemma}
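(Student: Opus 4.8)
The plan is to show that $\sfR_\mfT^\mfs(\vartheta_0)$ is a smooth manifold by combining two ingredients: first, that the constraint surface $\sfB_\mfT^\mfs(\vartheta_0)$ is a smooth submanifold of the ambient Euclidean space, and second, that the finite group $\Gamma_\mfT^\mfs$ acts freely on it, so that the quotient inherits a smooth structure. Much of the work has already been isolated in the material preceding the statement, so the proof is essentially a bookkeeping argument.

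First I would verify that $\sfB_\mfT^\mfs(\vartheta_0)$ is a smooth manifold. By Definition~\ref{def:B varthetaz}, $\sfB_\mfT^\mfs(\vartheta_0)$ is the subset of $\sfB_\mfT^\mfs$ with $\tr K=\vartheta_0$; since $\tr K=\theta$, one uses the reformulation (\ref{eq:ham cons fixed theta}) of the Hamiltonian constraint (\ref{eq:nuvhc}) with the left-hand side fixed to be $\vartheta_0^2$. The key point, already recorded in the discussion just before the statement, is that the gradient (in the ambient coordinates $(k_+,k_-,\bn_1,\bn_2,\bn_3,\bphi_0,\bphi_1)$, or equivalently in the original $(\bn,K,\bphi_0,\bphi_1)$ variables) of the right-hand side of (\ref{eq:ham cons fixed theta}) vanishes only at trivial initial data, namely when $(\mfT,\mfs)=(\mrI,\iso)$, $\bphi_1=0$, $V'(\bphi_0)=0$ and $\vartheta_0^2=3V(\bphi_0)$. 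But such points have been explicitly excluded from $\sfB_\mfT^\mfs$ in Definition~\ref{def:degenerate version of id} (the non-triviality clause in the definition of $\sfB$, inherited by each $\sfB_\mfT$ and each symmetry-reduced subset $\sfB_\mfT^\mfs$). Hence $\vartheta_0^2$ is a regular value of the relevant function on the open set $\sfB_\mfT^\mfs$, and by the regular value theorem $\sfB_\mfT^\mfs(\vartheta_0)$ is a smooth submanifold. One must also check that cutting $\sfB_\mfT^\mfs$ down by the symmetry conditions (invariance under $\Gamma^+$ in the isotropic case, under a transposition in the LRS and permutation cases) does not interfere: these symmetry loci were shown to be smooth submanifolds in the proof of Lemma~\ref{lemma:sfR mfT mfs} (by the same type of gradient computation as in Remark~\ref{remark:mfBmfT subm}), and intersecting with $\{\theta=\vartheta_0\}$ simply restricts the linear constraint $\tr K=\vartheta_0$ to a lower-dimensional linear subspace, where the same non-degeneracy of the gradient applies since trivial data are still excluded.

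Next I would invoke the group action. By Lemma~\ref{lemma:sfR mfT mfs}, $\Gamma_\mfT^\mfs$ acts freely and properly discontinuously on $\sfB_\mfT^\mfs$, and $\sfR_\mfT^\mfs$ is the (smooth) quotient. Since every element $\psi_\sigma^\pm$ of $\Gamma$ only permutes and flips signs of the $\bn_i$ and $k_i$ and fixes $\bphi_0,\bphi_1$, it preserves $\tr K=k_1+k_2+k_3$; hence $\Gamma_\mfT^\mfs$ maps $\sfB_\mfT^\mfs(\vartheta_0)$ to itself. The action on this invariant submanifold is still free (a subset of a free action is free) and, $\Gamma_\mfT^\mfs$ being finite, automatically properly discontinuous. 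Therefore the quotient $\sfR_\mfT^\mfs(\vartheta_0)$ — which by Definition~\ref{def:B varthetaz} is exactly the image of $\sfB_\mfT^\mfs(\vartheta_0)$ in $\sfR_\mfT^\mfs$, equivalently $\sfB_\mfT^\mfs(\vartheta_0)/\Gamma_\mfT^\mfs$ — is a smooth manifold, and the quotient map is a smooth covering onto it.

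I do not anticipate a serious obstacle here; the only point requiring genuine care is the first one, confirming that the level-set $\{\theta=\vartheta_0\}$ inside the already-constrained symmetry-reduced manifold $\sfB_\mfT^\mfs$ is cut out transversally, i.e. that the differential of the Hamiltonian-constraint function restricted to the symmetry locus is surjective away from trivial data. This is the same computation used throughout Section~\ref{ssection:par reg in data}, and the conclusion follows because the only critical points are the trivial data, which have been removed by definition. The group-theoretic part is immediate from Lemma~\ref{lemma:sfR mfT mfs} together with the observation that each $\psi_\sigma^\pm$ preserves $\tr K$.
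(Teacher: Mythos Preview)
Your proof is correct and follows essentially the same approach as the paper: show that $\sfB_\mfT^\mfs(\vartheta_0)$ is a smooth manifold via the regular-value argument recorded just before the lemma (trivial data being excluded), observe that $\Gamma_\mfT^\mfs$ preserves $\tr K$ and hence acts freely and properly discontinuously on this level set by Lemma~\ref{lemma:sfR mfT mfs}, and conclude the quotient is smooth. The paper's proof is simply a terser version of what you wrote.
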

\begin{remark}\label{remark:par iso id fixed mc}
  The bijection whose existence is guaranteed by Lemma~\ref{lemma:sc mfB mfT mfs param} preserves the mean curvature. It therefore induces
  a bijection from ${}^{\rosc}\mfB_\mfT^{\mfs}[V](\vartheta_0)$ to $\sfR_\mfT^\mfs(\vartheta_0)$ for the combinations of $\mfT$ and $\mfs$ appearing
  in Lemma~\ref{lemma:sc mfB mfT mfs param}. Similarly, there is a bijection from $\sfR_{\mrI,\rond}^{\iso}(\vartheta_0)$ to 
  \begin{equation}\label{eq:mfB I nd def}
    {}^{\rosc}\mfB_{\mrI,\rond}^{\iso}[V](\vartheta_0):={}^{\rosc}\mfB_{\mrI}^{\iso}[V](\vartheta_0)\cap \{\bphi_1\neq 0\}.
  \end{equation}
  Due to the lemma, these bijections define smooth structures on ${}^{\rosc}\mfB_\mfT^{\mfs}[V](\vartheta_0)$ and
  ${}^{\rosc}\mfB_{\mrI,\rond}^{\iso}[V](\vartheta_0)$. In Section~\ref{section:Id fix mean curv}, we explain how to combine these sets to define a smooth
  structure on the set of isometry classes of developments.
\end{remark}
\begin{proof}
  Due to the observations made prior to the statement of the lemma, the set of elements in $\sfB_\mfT^\mfs$ with fixed mean curvature
  $\vartheta_0$, say $\sfB_\mfT^\mfs(\vartheta_0)$, is a smooth manifold. The elements of $\Gamma_\mfT^\mfs$ preserve the mean curvature
  and therefore $\Gamma_\mfT^\mfs$ acts freely and properly discontinuously on $\sfB_\mfT^\mfs(\vartheta_0)$. This means that the quotient, i.e.
  $\sfR_\mfT^\mfs(\vartheta_0)$, is smooth. 
\end{proof}
In the case of isotropic Bianchi type I, it is unfortunately not sufficient to focus on ${}^{\rosc}\mfB_{\mrI,\rond}^{\iso}[V](\vartheta_0)$; we also have to
impose the condition that the corresponding developments have a crushing singularity. Moreover, in order to obtain a smooth structure for the
set of isometry classes of developments, we need the condition that there is a crushing singularity to be open. One way to ensure this is to demand
that $V\in\mfP_{\ropar}$, see Definition~\ref{def:mfPdef}. Imposing this condition also makes it possible to ensure that all solutions with a crushing
singularity attain one fixed mean curvature; i.e., it is possible to parametrise the set of elements in ${}^{\rosc}\mfD_\mfT^\mfs[V]$ with a crushing
singularity by ${}^{\rosc}\mfB_\mfT^{\mfs}[V](\vartheta_0)$ for a suitably chosen $\vartheta_0$. To justify these statements, we next make some observations
concerning potentials belonging to $\mfP_{\ropar}$.
\begin{lemma}\label{lemma:vmax bds limsup}
  Let $V\in\mfP_{\ropar}$; see Definition~\ref{def:mfPdef}. If $\{V(s)\, |\, s\geq 0\}$ is bounded, then $v_{\max}(V)\geq\limsup_{s\rightarrow\infty}V(s)$.
  Similarly, if $\{V(s)\, |\, s\leq 0\}$ is bounded, then $v_{\max}(V)\geq\limsup_{s\rightarrow-\infty}V(s)$.
\end{lemma}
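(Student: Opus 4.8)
The plan is to split according to whether $V$ has a limit at $+\infty$ and, in the oscillatory case, to extract critical points of $V$ carrying large values from the oscillations of $V$. Since $\mfP_{\ropar}$ is invariant under $s\mapsto -s$ (all defining conditions, including the value of $v_{\max}$, are unchanged if $V$ is replaced by $\bar V(s):=V(-s)$), and since $\limsup_{s\to\infty}\bar V(s)=\limsup_{s\to-\infty}V(s)$ while $\{\bar V(s)\mid s\le 0\}$ is bounded exactly when $\{V(s)\mid s\ge 0\}$ is, it suffices to prove the first assertion. So assume $\{V(s)\mid s\ge 0\}$ is bounded and put $M:=\limsup_{s\to\infty}V(s)$, which is then a finite real number.

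First I would dispose of the case that $\lim_{s\to\infty}V(s)$ exists. By boundedness this limit is finite, it equals $v_{\infty,+}$ in the notation of Definition~\ref{def:mfPdef}, and then $M=v_{\infty,+}\le v_{\max}(V)$ straight from the definition of $v_{\max}$. There is nothing more to do in this case.

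The substantive case is when $V$ does not converge at $+\infty$; then $\liminf_{s\to\infty}V(s)<\limsup_{s\to\infty}V(s)=M$, and I would fix a constant $c$ with $\liminf_{s\to\infty}V(s)<c<M$. Consider the open subset $U:=\{s\in\ro\mid V(s)>c\}$ of $\ro$ and its decomposition into connected components, which are open intervals. Because $\liminf_{s\to\infty}V(s)<c$, no component contains a half-line of the form $(a,\infty)$; hence any point $s$ with $V(s)>c$ that is large enough lies in a bounded component $(a_j,b_j)$ of $U$, and continuity of $V$ forces $V(a_j)=V(b_j)=c$ while $V>c$ throughout $(a_j,b_j)$. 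On the compact interval $[a_j,b_j]$ the continuous function $V$ attains a maximum, and since the interior values exceed the endpoint value $c$ this maximum is attained at some interior point $\sigma_j$, so $V'(\sigma_j)=0$ and hence $V(\sigma_j)\le v_{\max}(V)$. Now pick $t_k\to\infty$ with $V(t_k)\to M$; for $k$ large $V(t_k)>c$, so $t_k$ lies in one of these bounded components and $V(t_k)\le V(\sigma_{j(k)})\le v_{\max}(V)$. Letting $k\to\infty$ yields $M\le v_{\max}(V)$, which is the claim.

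The step requiring the most care is the component analysis in the last paragraph: one has to verify that the components of $U$ meeting arbitrarily large arguments are genuinely bounded intervals whose endpoints both carry the value $c$ — this is where $\liminf_{s\to\infty}V(s)<c$ and the fact that $V$ is defined and continuous on all of $\ro$ are used — and that the maximum on such a component is taken at an interior point, so that it really is a critical value of $V$. The remainder is just unwinding the definitions of $\limsup$, of $v_{\infty,\pm}$, and of $v_{\max}(V)$; in particular the argument uses only that $V\in C^{1}(\ro)$ together with the boundedness hypothesis, and none of the other conditions in the definition of $\mfP_{\ropar}$.
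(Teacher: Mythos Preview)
Your proof is correct and follows the paper's strategy: reduce to $s\to+\infty$ by the $s\mapsto -s$ symmetry, handle the convergent case directly via $v_{\infty,+}$, and in the oscillatory case produce critical points with values arbitrarily close to $\limsup_{s\to\infty}V(s)$ (the paper does this by locating sign changes of $V'$, you by taking interior maxima on bounded components of the superlevel set $\{V>c\}$, which is an equally clean route). One harmless slip in your symmetry reduction: the equivalence you need in order to apply the first assertion to $\bar V$ is that $\{\bar V(s)\mid s\ge 0\}$ is bounded iff $\{V(s)\mid s\le 0\}$ is bounded, not the mirror version you wrote.
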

\begin{proof}
  Note that $V\in\mfP_{\ropar}\Rightarrow \overline{V}\in\mfP_{\ropar}$ with $v_{\max}(\overline{V})=v_{\max}(V)$, where $\overline{V}(s):=V(-s)$. Thus
  the second statement of the lemma follows by applying the first statement of the lemma to $\overline{V}$. We therefore only prove the
  first statement. If $\lim_{s\rightarrow\infty}V(s)$ exists, the desired conclusion follows from the definition. Assume that the limit does not
  exist. Then $v_-<v_+$, where
  \[
    v_+:=\limsup_{s\rightarrow\infty}V(s),\ \ \
    v_-:=\liminf_{s\rightarrow\infty}V(s).
  \]
  There are thus, for each $n\in\nn{}$, $s_n\leq u_n$ such that $s_n,u_n\rightarrow\infty$, $V(s_n)\geq v_+-1/n$, 
  $V'(s_n)>0$ and $V'(u_n)<0$. This means that there is a $w_n\in [s_n,u_n]$ such that $V'(s)\geq 0$ in $[s_n,w_n]$
  and $V'(w_n)=0$. Since $V'\geq 0$ on $[s_n,w_n]$, we have to have $V(w_n)\geq v_+-1/n$. By the definition of $v_{\max}$, this means that
  $v_{\max}(V)\geq v_+-1/n$ for all $n\in\nn{}$, so that $v_{\max}(V)\geq v_+$. The lemma follows. 
\end{proof}
\begin{cor}\label{cor:vmax geq sup}
  Let $V\in\mfP_{\ropar}$; see Definition~\ref{def:mfPdef}. If $V$ is bounded, then $v_{\max}(V)\geq \sup_{s\in\ro}V(s)$. 
\end{cor}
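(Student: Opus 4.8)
The plan is to reduce the statement to Lemma~\ref{lemma:vmax bds limsup} by a short compactness argument applied to a maximising sequence. First I would set $M:=\sup_{s\in\ro}V(s)$, which is finite since $V$ is bounded, and pick a sequence $(s_n)$ in $\ro$ with $V(s_n)\rightarrow M$. Passing to a subsequence and using compactness of the extended real line $\ro\cup\{\pm\infty\}$, I would split into the three mutually exclusive cases $s_n\rightarrow +\infty$, $s_n\rightarrow -\infty$, and $s_n\rightarrow s_*$ for some $s_*\in\ro$.

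In the case $s_n\rightarrow +\infty$ one has $\limsup_{s\rightarrow\infty}V(s)\geq M$; since $V$ is bounded, $\{V(s)\,|\,s\geq 0\}$ is bounded, so the first assertion of Lemma~\ref{lemma:vmax bds limsup} gives $v_{\max}(V)\geq\limsup_{s\rightarrow\infty}V(s)\geq M$. The case $s_n\rightarrow -\infty$ is identical, invoking instead the second assertion of Lemma~\ref{lemma:vmax bds limsup} together with the boundedness of $\{V(s)\,|\,s\leq 0\}$. In the remaining case $s_n\rightarrow s_*\in\ro$, continuity of $V$ forces $V(s_*)=M$, hence $s_*$ is a global maximum and $V'(s_*)=0$; therefore $M=V(s_*)$ lies in the set $\{V(s_0)\,|\,s_0\in\ro,\,V'(s_0)=0\}$ appearing in Definition~\ref{def:mfPdef}, and once more $v_{\max}(V)\geq M$. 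Combining the three cases yields $v_{\max}(V)\geq\sup_{s\in\ro}V(s)$.

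I do not expect a real obstacle here; the only point requiring care is that the supremum of $V$ need not be attained, so one cannot directly appeal to the critical-value part of the definition of $v_{\max}$. This is exactly the configuration that Lemma~\ref{lemma:vmax bds limsup} handles, and the case split on where the maximising sequence accumulates is precisely what channels each possibility into the appropriate part of that lemma (or into the definition of $v_{\max}$ itself). If desired, one can also note the reverse inequality: for $V\geq 0$ one has $v_{\infty,\pm}\leq\sup_{s\in\ro}V(s)$ and $V(s_0)\leq\sup_{s\in\ro}V(s)$ for every critical point $s_0$, so that in fact $v_{\max}(V)=\sup_{s\in\ro}V(s)$ for bounded $V\in\mfP_{\ropar}$; but only the inequality stated in the corollary is needed in what follows.
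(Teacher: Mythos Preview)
Your proof is correct and follows essentially the same strategy as the paper: both reduce to Lemma~\ref{lemma:vmax bds limsup} for the contribution at infinity and to the critical-value part of the definition of $v_{\max}$ when the supremum is actually attained. The paper organises the dichotomy via an $\e$-localisation to a compact interval $[s_-,s_+]$ (splitting on interior versus boundary maximum), whereas you use a maximising sequence and compactness of $\ro\cup\{\pm\infty\}$; your version is a clean equivalent and avoids the final $\e\to 0$ step.
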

\begin{proof}
  Let $\e>0$ and define
  \[
    V_+:=\limsup_{s\rightarrow\infty}V(s),\ \ \
    V_-:=\limsup_{s\rightarrow-\infty}V(s).
  \]
  Then there are $s_-<s_+$ such that $V(s)\leq V_-+\e$ for all $s\leq s_-$ and $V(s)\leq V_++\e$ for all $s\geq s_+$. If $V$ attains its maximum on
  $[s_-,s_+]$ at an interior point, say $s_0$, then $V'(s_0)=0$, so that
  \[
    v_{\max}(V)\geq V(s_0)=\max_{s\in[s_-,s_+]}V(s).
  \]
  If $V$ attains its maximum on $[s_-,s_+]$ on the boundary, then, due to Lemma~\ref{lemma:vmax bds limsup},
  \[
    v_{\max}(V)+\e\geq\max\{V_+,V_-\}+\e\geq \max_{s\in[s_-,s_+]}V(s).
  \]
  To conclude $v_{\max}(V)+\e\geq\sup_{s\in\ro}V(s)$. Since this holds for all $\e>0$, the desired conclusion follows. 
\end{proof}
It is also going to be useful to know the following concerning potentials $V\in\mfP_{\ropar}$.
\begin{lemma}\label{lemma:V prime sign}
  Let $V\in\mfP_{\ropar}$; see Definition~\ref{def:mfPdef}. If $\{V(s)\, |\, s\geq 0\}$ is unbounded, there is an $s_+$ such that $V'(s)>0$ for all
  $s\geq s_+$. If $\{V(s)\, |\, s\leq 0\}$ is unbounded, there is an $s_-$ such that $V'(s)<0$ for all $s\leq s_-$. 
\end{lemma}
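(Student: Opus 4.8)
The plan is to exploit the key hypotheses defining $\mfP_{\ropar}$: that $V\geq 0$, that $V'$ is bounded on every interval on which $V$ is bounded, that $V'(s)$ tends to a (possibly infinite) limit as $s\rightarrow\pm\infty$, and that $v_{\max}(V)<\infty$. By the symmetry $V\in\mfP_{\ropar}\Rightarrow\bar V\in\mfP_{\ropar}$ with $v_{\max}(\bar V)=v_{\max}(V)$, where $\bar V(s):=V(-s)$, it suffices to prove the first assertion; the second then follows by applying the first to $\bar V$ (noting that $\{V(s)\,|\,s\leq 0\}$ unbounded is equivalent to $\{\bar V(s)\,|\,s\geq 0\}$ unbounded, and that $V'(s)<0$ for $s\leq s_-$ is equivalent to $\bar V'(s)>0$ for $s\geq -s_-$).

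First I would observe that, since $V'(s)$ has a limit $\ell\in[-\infty,\infty]$ as $s\rightarrow\infty$, it suffices to rule out $\ell\leq 0$. If $\ell<0$ (including $\ell=-\infty$), then $V'(s)<0$ for all large $s$, so $V$ is eventually decreasing and hence eventually bounded above by a constant; in particular $\{V(s)\,|\,s\geq 0\}$ would be bounded, contradicting the hypothesis. So the only remaining case is $\ell=0$. Here I would argue by contradiction: suppose there is no $s_+$ with $V'>0$ on $[s_+,\infty)$. Since $V'(s)\rightarrow 0$, the function $V'$ cannot change sign infinitely often while staying bounded away from zero, but it can approach zero from below (or oscillate through zero). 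In either subcase, there is a sequence $t_n\rightarrow\infty$ with $V'(t_n)\leq 0$.

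The heart of the argument is then to combine $\ell=0$ with boundedness of $v_{\max}$. Because $V'(s)\rightarrow 0$ and $\{V(s)\,|\,s\geq 0\}$ is unbounded, I would show that $V$ must have critical points at arbitrarily large heights: pick $u_n\rightarrow\infty$ with $V(u_n)\rightarrow\infty$; since $V'(s)\rightarrow 0$, for each large threshold the set $\{s: V'(s)>0\}$ cannot contain a half-line (by the contradiction hypothesis) nor can $V$ climb to height $V(u_n)$ purely along stretches where $V'>0$ and then never reach a local max — more precisely, on the interval $[0,u_n]$ the function $V$ attains its maximum at some interior point $w_n$ (interior for $n$ large, since $V(0)$ is fixed and $V(u_n)\rightarrow\infty$ forces the max to exceed $V(0)$ and $V(u_n)$ could itself be non-maximal, but in any event some interior point of $[0,u_n+1]$ is a local maximum with value $\geq V(u_n)$), whence $V'(w_n)=0$ and $V(w_n)\geq V(u_n)\rightarrow\infty$. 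By the definition of $v_{\max}(V)$ this gives $v_{\max}(V)\geq V(w_n)\rightarrow\infty$, contradicting $v_{\max}(V)<\infty$. Hence the contradiction hypothesis fails, and $V'>0$ on some half-line $[s_+,\infty)$.

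The main obstacle I expect is handling the case $\ell=0$ carefully: one must correctly locate an interior local maximum of $V$ on $[0,u_n]$ (or a slightly enlarged interval) of height tending to infinity, using only that $V(0)$ is a fixed finite number while $\sup_{[0,u_n]}V\rightarrow\infty$; the point is that the supremum of $V$ over $[0,u_n]$ either is attained at an interior critical point, giving the contradiction directly, or is attained at the endpoint $u_n$, in which case one replaces $u_n$ by a point just beyond the next local max (which exists since $V'(u_n)$ is small and $V$ cannot increase forever without $V'$ being eventually positive, which we have assumed it is not). This bookkeeping is the only delicate part; everything else is an immediate consequence of the structural hypotheses on $\mfP_{\ropar}$ and of Lemma~\ref{lemma:vmax bds limsup}.
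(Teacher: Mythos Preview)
Your approach is correct and follows the same core strategy as the paper: reduce to the $s\to+\infty$ case via the symmetry $\bar V(s)=V(-s)$, then argue by contradiction, producing critical points $w_n$ with $V(w_n)\to\infty$ and hence $v_{\max}(V)=\infty$. Your preliminary split on the limit $\ell$ of $V'$ is a minor variation; the paper does not invoke that hypothesis at all, instead directly assuming a sequence $s_n\to\infty$ with $V'(s_n)\leq 0$ and then splitting on whether $V(s_n)$ has a subsequence tending to infinity (in which case the interval $[0,s_{n_k}]$ already yields the critical point) or stays bounded (in which case a sandwich argument with the unbounded points $u_n$ produces interior maxima).

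The one soft spot, which you correctly flag, is the endpoint case when the maximum of $V$ on $[0,u_n]$ falls at $u_n$ itself. Your suggested fix (``look beyond $u_n$ for the next local max'') is morally right but not cleanly justified as written. The simplest repair: having already produced the sequence $t_n\to\infty$ with $V'(t_n)\leq 0$, choose some $t_m>u_n$ and work instead on $[0,t_m]$. The maximum there is $\geq V(u_n)$, is not at $0$ for large $n$, and cannot be at $t_m$ with $V'(t_m)<0$ (since $V$ would then be strictly larger immediately to the left of $t_m$). Hence the maximum is either at an interior point or at $t_m$ with $V'(t_m)=0$; in either case a critical point with value $\geq V(u_n)\to\infty$. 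This is essentially how the paper closes the argument, by using the points where $V'\leq 0$ to bound the interval on the right.
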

\begin{remark}\label{remark:divergence to infinity V}
  If $\{V(s)\, |\, s\geq 0\}$ is unbounded, it follows from the conclusions of the lemma that $V(s)\rightarrow\infty$ as $s\rightarrow\infty$. There
  are two possibilities. Either $V'(s)>0$ for all $s$. In that case $V(s)>v_{\max}(V)$ for all $s\in\ro$. Define, in this case, $s_+(V):=-\infty$. The
  other possibility is that there is an $s_0\in\ro$ such that $V'(s_0)=0$. In that case, let $s_a:=\sup\{s\in\ro:V'(s)=0\}$. Due to the assumptions
  and the conclusions of the lemma, $s_a\in\ro$.
  There are two possibilities. Either $V(s_a)=v_{\max}(V)$, in which case we define $s_{+}(V):=s_a$; note that $s_+(V)$ is then the smallest $s$
  such that $V(u)>v_{\max}(V)$ for all $u>s$. The final possibility is that $V(s_a)<v_{\max}(V)$. In that case there is a smallest $s$, larger than
  $s_a$, such that $V(s)=v_{\max}(V)$. Call this smallest $s$ $s_b$ and define $s_{+}(V):=s_b$; note that $s_+(V)$ is then the smallest $s$
  such that $V(u)>v_{\max}(V)$ for all $u>s$. By the definition of $v_{\max}(V)$, it also follows that $V'(s)>0$ for all $s>s_+(V)$. One can
  similarly define $s_-(V)$ in case $\{V(s)\, |\, s\leq 0\}$ is unbounded. 
\end{remark}
\begin{proof}
  For the same reason as in the proof of Lemma~\ref{lemma:vmax bds limsup}, it is sufficient to prove the first statement.   
  Assume, in order to obtain a contradiction, that
  $\{V(s)\, |\, s\geq 0\}$ is unbounded, but there is a sequence $s_n\rightarrow\infty$ such that $V'(s_n)\leq 0$. If there is a subsequence
  $\{s_{n_k}\}$ such that $V(s_{n_k})\rightarrow\infty$, we conclude that $v_{\max}(V)=\infty$, contradicting the fact that $V\in\mfP_{\ropar}$. Choosing a
  subsequence, if necessary, we can thus assume that $V(s_n)$ converges to a finite limit, say $V_+$. By the assumptions, there is also a sequence
  $u_n\rightarrow\infty$ such that $V(u_n)\rightarrow\infty$. Combining the existence of these two sequences, an argument similar to the one presented
  in the proof of Lemma~\ref{lemma:vmax bds limsup} implies that there is a sequence $w_n\rightarrow\infty$ such that $V'(w_n)=0$ and
  $V(w_n)\rightarrow\infty$. Thus $v_{\max}(V)=\infty$, contradicting the fact that $V\in\mfP_{\ropar}$. 
\end{proof}
Fix a $V\in\mfP_{\ropar}$ and a $\vartheta_0>[3v_{\max}(V)]^{1/2}$. Then, in the Bianchi class A setting, Bianchi type $\mfT\neq$IX initial data with
initial mean curvature $\vartheta_0$ correspond to Bianchi type $\mfT$ non-linear scalar field developments with a crushing singularity; see
Lemma~\ref{lemma:bth large enough}. The condition that the solution have a crushing singularity of course excludes some isotropic Bianchi type I
solutions; see Remark~\ref{remark:Exceptional Bianchi type I} and Lemma~\ref{lemma:BianchiAdevelopment}. Next, we turn to the topology of
$\sfR_{\mrI}^{\iso}(\vartheta_0)$, which, for reasons explained in the introduction, is of particular interest.
\begin{lemma}\label{lemma:top of devel iso BI}
  Fix a $V\in\mfP_{\ropar}$ and a $\vartheta_0>[3v_{\max}(V)]^{1/2}$. If $V$ is bounded, $\sfR_{\mrI}^{\iso}(\vartheta_0)$ is diffeomorphic to two
  copies of $\ro$. If one of $\{V(s)\, |\, s\leq 0\}$ and $\{V(s)\, |\, s\geq 0\}$ is bounded and the other is unbounded, then
  $\sfR_{\mrI}^{\iso}(\vartheta_0)$ is diffeomorphic to $\ro$. If both are unbounded, then $\sfR_{\mrI}^{\iso}(\vartheta_0)$ is diffeomorphic to $\sn{1}$.
\end{lemma}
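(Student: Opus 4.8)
The plan is to work directly with the defining equation for $\sfR_{\mrI}^{\iso}(\vartheta_0)$. In the isotropic Bianchi type I setting we have $\bn=0$ and $K=\theta\roId/3$, so $\sfB_{\mrI}^{\iso}$ is parametrised by $(\bphi_0,\bphi_1)$, with the Hamiltonian constraint (\ref{eq:ham cons fixed theta}) reducing to $\vartheta_0^2=\tfrac{3}{2}\bphi_1^2+3V(\bphi_0)$, subject to the non-triviality condition $(\bphi_1,V'(\bphi_0))\neq(0,0)$. Since $\mfT=\mrI$ and $\mfs=\iso$, we have $\Gamma_{\mrI}^{\iso}=\{\roId\}$ by Definition~\ref{def:BTs GTs}, so $\sfR_{\mrI}^{\iso}(\vartheta_0)=\sfB_{\mrI}^{\iso}(\vartheta_0)$ — no quotient is taken. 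Thus the problem is purely to determine the topology of the solution set of $\tfrac{3}{2}\bphi_1^2=\vartheta_0^2-3V(\bphi_0)$ in the $(\bphi_0,\bphi_1)$-plane with triviality points removed. First I would introduce the open set $U:=\{s\in\ro : V(s)<\vartheta_0^2/3\}$; since $\vartheta_0^2/3>v_{\max}(V)\geq 0$ and $V\geq 0$, this set is nonempty. For $s\in U$ the equation has two solutions $\bphi_1=\pm[\tfrac{2}{3}(\vartheta_0^2-3V(s))]^{1/2}$, which coincide precisely at $\d U$, and for $s\notin \overline{U}$ there are none.

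**Identifying the structure of $U$ and the branch points.** The key observation is that a point $(\bphi_0,\bphi_1)$ with $\bphi_1=0$ (i.e.\ $\bphi_0\in\d U$, $V(\bphi_0)=\vartheta_0^2/3$) is non-trivial and a smooth point of the manifold exactly when $V'(\bphi_0)\neq 0$; and by Lemma~\ref{lemma:sfRmfTmfsvartheta sm mfd} the whole set $\sfR_{\mrI}^{\iso}(\vartheta_0)$ is a smooth $1$-manifold, which already forces $V'(\bphi_0)\neq 0$ at every $\bphi_0\in\d U$ (were it zero, that point would violate the smoothness established in the lemma, since there $V'(\bphi_0)=0$ makes the data trivial and excluded, but the two-branch structure would still create a crossing — this is exactly why the trivial points are thrown out). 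I would then use Remark~\ref{remark:divergence to infinity V} and Lemma~\ref{lemma:V prime sign}: if $\{V(s):s\geq 0\}$ is unbounded then $V(s)\to\infty$ and $V'(s)>0$ for all large $s$, so $U$ is bounded above and its right endpoint $s_+$ satisfies $V'(s_+)>0$; symmetrically on the left. If instead $\{V(s):s\geq 0\}$ is bounded, then by Corollary~\ref{cor:vmax geq sup} (applied after noting $V\in\mfP_{\ropar}$) one has $\sup_{s\geq 0}V\leq v_{\max}(V)<\vartheta_0^2/3$, so $[0,\infty)\subset U$ and $U$ is unbounded above; similarly on the left. Hence the shape of $U$ on each side (bounded with a regular endpoint, or a half-line) is dictated precisely by the boundedness of $V$ on that side. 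Since $U$ is open and, by the above, has at most one boundary point on each side, $U$ is a single open interval $(a,b)$ with $-\infty\le a<b\le\infty$.

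**Assembling the topology.** Now I would treat the solution set as a graph over $U$ together with its boundary branch points. The map $s\mapsto (s,\,[\tfrac23(\vartheta_0^2-3V(s))]^{1/2})$ gives the ``upper'' branch and the minus sign the ``lower'' branch; these are smooth embeddings of $U$, disjoint over the interior of $U$, and they glue continuously at each finite endpoint of $U$ where $V=\vartheta_0^2/3$, with the gluing being smooth because $V'\neq 0$ there (the square root has a non-degenerate fold). So: if both endpoints of $U$ are finite (both tails of $V$ unbounded), the two branches join at both ends, producing a circle, $\sn{1}$; if exactly one endpoint is finite (one tail bounded, one unbounded), the two branches join at that one end and run off to infinity at the other, producing a single copy of $\ro$; if $U=\ro$ (both tails bounded), the two branches never meet, producing two disjoint copies of $\ro$. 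Finally I would upgrade ``homeomorphic'' to ``diffeomorphic'': this is immediate since in each case the gluing is smooth (fold singularity of a Morse-type function) and each branch is a smooth curve, so the resulting $1$-manifold carries a smooth structure making the evident bijections with $\sn{1}$ resp.\ $\ro\sqcup\ro$ resp.\ $\ro$ diffeomorphisms; compatibility with the smooth structure of Lemma~\ref{lemma:sfRmfTmfsvartheta sm mfd} follows since both are the subspace smooth structure induced from the regular level set. The main obstacle I anticipate is not the gluing picture itself but making the endpoint analysis airtight: one must rule out the possibility of infinitely many ``almost-maxima'' of $V$ near $\sup U$ creating extra boundary components, and here the hypothesis $V\in\mfP_{\ropar}$ — in particular $v_{\max}(V)<\infty$ together with the one-sided limits of $V'$ — is exactly what is needed, via Lemma~\ref{lemma:V prime sign} and Remark~\ref{remark:divergence to infinity V}, to conclude that $\d U$ consists of at most one point on each side and that $V$ crosses the level $\vartheta_0^2/3$ transversally there.
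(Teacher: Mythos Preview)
Your approach is the same as the paper's: reduce to the constraint curve $\tfrac{3}{2}\bphi_1^2 = \vartheta_0^2 - 3V(\bphi_0)$ (the quotient by $\Gamma_{\mrI}^{\iso}=\{\roId\}$ is trivial), study the sublevel set $U = \{V < \vartheta_0^2/3\}$, and glue the two branches at the finite endpoints of $U$. Two steps are not sound as written. First, Corollary~\ref{cor:vmax geq sup} requires $V$ to be \emph{globally} bounded, so in the one-side-bounded case you cannot conclude $\sup_{s\geq 0} V \leq v_{\max}(V)$; this is false in general (take $V$ strictly decreasing from $+\infty$ to a positive limit $v_{\infty,+}$, so that $v_{\max}=v_{\infty,+}<V(s)$ for every $s$), and hence your claim $[0,\infty)\subset U$ is unwarranted. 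Lemma~\ref{lemma:vmax bds limsup} gives only the $\limsup$ bound, which suffices to show $U$ is unbounded on that side but no more.

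Second, and more seriously, you assert that $U$ is a single interval without proof, and the tools you invoke for the ``main obstacle'' --- Lemma~\ref{lemma:V prime sign} and Remark~\ref{remark:divergence to infinity V} --- control behaviour at infinity, not connectedness. The missing argument, which the paper gives in each case, is: were $U$ disconnected, there would be a local maximum of $V$ in the gap with value $\geq \vartheta_0^2/3 > v_{\max}(V)$; but at a local maximum $V'=0$, so by the definition of $v_{\max}$ the value there is $\leq v_{\max}(V)$, a contradiction. The same observation gives $V'\neq 0$ at every point of $\d U$ directly (if $V'=0$ there, then $v_{\max}(V)\geq \vartheta_0^2/3$); your deduction of this from Lemma~\ref{lemma:sfRmfTmfsvartheta sm mfd} is circular, since that lemma establishes smoothness only \emph{after} trivial points have been excluded and therefore cannot itself certify that boundary points are non-trivial.
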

\begin{remark}\label{remark:sfR I iso B I plus iso}
  Due to Definition~\ref{def:degenerate version of id}, Definition~\ref{def:BTs GTs} and Lemma~\ref{lemma:sfR mfT mfs}, it is clear that
  \begin{equation*}
    \begin{split}
      \sfR_{\mrI}^{\iso} = & \sfB_{\mrI}^{\iso}/\Gamma^{\iso}_{\mrI}=\sfB_{\mrI}^{\iso}\\
      = & \{(0,\theta_0\roId/3,\phi_0,\phi_1)\, |\, \theta_0^2=3\phi_1^2/2+3V(\phi_0),\ (\phi_1,V'(\phi_0))\neq (0,0)\}. 
    \end{split}
  \end{equation*}  
  This set can clearly be identified with $B_{\mrI,+}^{\iso}$. Similarly, we can identify $\sfR_{\mrI}^{\iso}(\vartheta_0)$ with
  $B_{\mrI,+}^{\iso}(\vartheta_0)$.
\end{remark}
\begin{proof}
  Due to Remark~\ref{remark:sfR I iso B I plus iso}, it is sufficient to determine the topology of $B_{\mrI,+}^{\iso}(\vartheta_0)$. To begin with,
  it is useful to note that the condition of non-degeneracy is superfluous for our choice of $\vartheta_0$. Let, to this end, $(\phi_0,\phi_1)$ be such
  that
  \begin{equation}\label{eq:ham con vartheta zero BI iso}
    \vartheta_0^2=\tfrac{3}{2}\phi_1^2+3V(\phi_0).
  \end{equation}
  Then either $\phi_1\neq 0$ or $\vartheta_0^2=3V(\phi_0)$. In the latter case, we have to have $V'(\phi_0)\neq 0$, since
  $\vartheta_0>[3v_{\max}(V)]^{1/2}$. Thus, for our choice of $\vartheta_0$,
  \[
  B^{\iso}_{\mrI,+}(\vartheta_0) := \{(\phi_0,\phi_1)\in \rn{2}\, |\, \vartheta_0^2=3\phi_1^2/2+3V(\phi_0)\};
  \]
  i.e., the condition of non-degeneracy is automatically satisfied. 

  In case $V$ is bounded, then $\vartheta_0^2>3\sup_{s\in\ro}V(s)$, cf. Corollary~\ref{cor:vmax geq sup}, so that (\ref{eq:ham con vartheta zero BI iso})
  defines two disjoint manifolds, each diffeomorphic to $\ro$ (the manifolds are distinguished by the sign of $\phi_1$).
  
  Next, consider the case that $\{V(s)\, |\, s\leq 0\}$ is unbounded but that $\{V(s)\, |\, s\geq 0\}$ is bounded. Due to Lemma~\ref{lemma:vmax bds limsup}
  and the definition of $\vartheta_0$, we know that there is an $s_1\in\ro$ such that $3V(s)<\vartheta_0^2$ for all $s\geq s_1$. Since, in addition,
  $V(u)\rightarrow\infty$ as $u\rightarrow-\infty$, see Remark~\ref{remark:divergence to infinity V}, we conclude that 
  \[
    s_0:=\inf\{s\in\ro\, |\, 3V(u)\leq \vartheta_0^2\ \forall\, u\geq s\}
  \]
  is a real number. In particular, $3V(s_0)=\vartheta_0^2$ and $3V(s)\leq \vartheta_0^2$ for all $s\geq s_0$. Moreover, by the definition of $\vartheta_0$,
  $V'(s_0)\neq 0$. However, the possibility $V'(s_0)>0$ is excluded by the definition of $s_0$. This means that $V'(s_0)<0$. Assume, in order to obtain
  a contradiction, that there is an $s_2<s_0$ such that $3V(s_2)\leq\vartheta_0^2$. Then $V$ must attain a local maximum on $[s_2,s_0]$ at an $s_3\in(s_2,s_0)$.
  Then $V'(s_3)=0$, and $3V(s_3)>3V(s_0)=\vartheta_0^2$, in contradiction with the definition of $\vartheta_0$. Thus $3V(s)>\vartheta_0^2$
  for all $s<s_0$. The set of
  $\phi_0$ such that (\ref{eq:ham con vartheta zero BI iso}) has a solution is thus equal to $[s_0,\infty)$. Assume, in order to obtain a contradiction,
  that there is an $s_+>s_0$ such that $3V(s_+)=\vartheta_0^2$. Since $3V(s)\leq\vartheta_0^2$ for all $s\geq s_0$, this means that $s_+$ is a local
  maximum, so that $V'(s_+)=0$. However, this contradicts the definition of $\vartheta_0$. Thus $3V(s)<\vartheta_0^2$ for all $s>s_0$. We can thus
  define the following two sets:
  \[
    A_\pm:=\{(\phi_0,\phi_1)\in\rn{2}\, |\, \vartheta_0^2=3\phi_1^2/2+3V(\phi_0),\ \pm\phi_1\geq 0\}.
  \]
  By the above observations,
  \[
    A_\pm=\left\{(\phi_0,\phi_1)\in\rn{2}\, |\, \phi_0\in [s_0,\infty),\ \phi_1=\pm\left(2\vartheta_0^2/3-2V(\phi_0)\right)^{1/2}\right\}.
  \]
  Moreover, the sets $A_\pm$ are each homeomorphic to $[s_0,\infty)$, by projection to the $\phi_0$-axis, and they only intersect in $(s_0,0)$. This means
  that the union is homeomorphic to $\ro$. Since the union is $B_{\mrI,+}^{\iso}(\vartheta_0)$ and $B_{\mrI,+}^{\iso}(\vartheta_0)$ is a smooth
  $1$-dimensional submanifold of $\rn{2}$, we obtain the desired conclusion. The statement in case $\{V(s)\, |\, s\leq 0\}$ is bounded and
  $\{V(s)\, |\, s\geq 0\}$ is unbounded follows by applying the same argument to $\overline{V}$ introduced in the proof of Lemma~\ref{lemma:vmax bds limsup}.

  Next, assume that $\{V(s)\, |\, s\leq 0\}$ and $\{V(s)\, |\, s\geq 0\}$ are both unbounded. Then $V(s)\rightarrow\infty$ as $s\rightarrow\pm\infty$ due to
  Remark~\ref{remark:divergence to infinity V}, a fact we use tacitly in what follows. Fix an $s_0\in\ro$. Choose $s_a<s_0<s_b$ such that
  $V(s_a)>V(s_0)$ and $V(s_b)>V(s_0)$. Then $V$ has an interior minimum on $[s_a,s_b]$ attained at, say, $s_1$, so that $V'(s_1)=0$. Note that
  $\vartheta_0^2>3V(s_1)$ by the definition of $\vartheta_0$. Let
  \begin{align*}
    \sigma_- := & \inf\{s\leq s_1\, |\, 3V(u)\leq\vartheta_0^2\ \forall\, u\in [s,s_1]\},\\
    \sigma_+ := & \sup\{s\geq s_1\, |\, 3V(u)\leq\vartheta_0^2\ \forall\, u\in [s_1,s]\}.    
  \end{align*}
  Then $\sigma_\pm\in\ro$; $\sigma_-<s_1<\sigma_+$; $3V(s)\leq \vartheta_0^2$ on
  $[\sigma_-,\sigma_+]$; and $3V(s)=\vartheta_0^2$ for $s\in\{\sigma_-,\sigma_+\}$. Since $3V(\sigma_-)=\vartheta_0^2$, we cannot have
  $V'(\sigma_-)=0$. Moreover, by the definition of $\sigma_-$, we cannot have $V'(\sigma_-)>0$. This means that $V'(\sigma_-)<0$. Similarly,
  $V'(\sigma_+)>0$. Assume that there is a $\sigma_0<\sigma_-$ such that $3V(\sigma_0)\leq\vartheta_0^2$. Then $V$ must have a local maximum in
  $(\sigma_0,\sigma_-)$, say $\sigma_1$. Then $3V(\sigma_1)\geq 3V(\sigma_-)=\vartheta_0^2$ and $V'(\sigma_1)=0$. However, this contradicts the
  definition of $\vartheta_0$. Thus $3V(s)>\vartheta_0^2$ for all $s<\sigma_-$. Similarly, $3V(s)>\vartheta_0^2$ for all $s>\sigma_+$. Assume, in
  order to reach a contradiction, that there is an $s\in (\sigma_-,\sigma_+)$ such that $3V(s)=\vartheta_0^2$. Then $s$ is a local maximum of $V$
  so that $V'(s)=0$. This contradicts the definition of $\vartheta_0$. To conclude, $3V(s)<\vartheta_0^2$ for all $s\in (\sigma_-,\sigma_+)$. Define 
  \[
    B_\pm:=\{(\phi_0,\phi_1)\in\rn{2}\, |\, \vartheta_0^2=3\phi_1^2/2+3V(\phi_0),\ \pm\phi_1\geq 0\}.
  \]
  By the above observations,
  \[
    B_\pm=\left\{(\phi_0,\phi_1)\in\rn{2}\, |\, \phi_0\in [\sigma_-,\sigma_+],\ \phi_1=\pm\left(2\vartheta_0^2/3-2V(\phi_0)\right)^{1/2}\right\}.
  \]
  Moreover, the sets $B_\pm$ are each homeomorphic to $[\sigma_-,\sigma_+]$, by projection to the $\phi_0$-axis, and they only intersect in
  $(\sigma_\pm,0)$. This means that the union is homeomorphic to $\sn{1}$. Since the union is $B_{\mrI,+}^{\iso}(\vartheta_0)$ and $B_{\mrI,+}^{\iso}(\vartheta_0)$
  is a smooth $1$-dimensional submanifold of $\rn{2}$, we obtain the desired conclusion.
\end{proof}

\section{Parametrising data on the singularity}\label{ssection:par data on sing}

In analogy with $\sfB$, we next introduce $\sfS$ as a first step in the parametrisation of isometry classes of initial data on the singularity. 
\begin{definition}\label{def:mfS def}
  Let $\sfS$
  \index{$\a$Aa@Notation!Symmetry reduced sets of initial data on the singularity!sfS@$\sfS$}%
  be the set of $(o,P,\Phi_0,\Phi_1)\in M_3(\ro)\times M_3(\ro)\times\rn{2}$ such that
  \begin{equation}\label{eq:P Phi one cond on sing}
    \rotr P=1, \ \ \rotr P^2+\Phi_1^2=1;
  \end{equation}
  $o=\mathrm{diag}(o_1,o_2,o_3)$; $P=\mathrm{diag}(p_1,p_2,p_3)$; if $\{i,j,k\}=\{1,2,3\}$ and $p_i=1$, then $o_j=o_k$; and if $p_l<1$ for all $l$ and
  $o_i\neq 0$ for some $i$, then $p_i>0$. Next, let $\sfS_{\mrI}$ denote the subset of $\sfS$ such that all the $o_i$ vanish; let $\sfS_{\mrII}$ denote
  the subset of $\sfS$ such that one of
  the $o_i$ is non-zero and the other two vanish; let $\sfS_{\mrVIz}$ denote the subset of $\sfS$ such that two of the $o_i$ are non-zero and have
  opposite signs, and the remaining $o_i$ vanishes; let $\sfS_{\mrVIIz}$
  \index{$\a$Aa@Notation!Symmetry reduced sets of initial data on the singularity!sfST@$\sfS_\mfT$}%
  denote the subset of $\sfS$ such that two of the $o_i$ are non-zero and have
  the same signs, and the remaining $o_i$ vanishes; let $\sfS_{\mrVIII}$ denote the subset of $\sfS$ such that all the $o_i$ are non-zero, but they
  do not all have the same sign; and let $\sfS_{\mrIX}$ denote the subset of $\sfS$ such that all the $o_i$ are non-zero, and all have the same sign.
\end{definition}
Just as regular initial data give rise to elements of $\sfB$, see Lemma~\ref{lemma:x arising from mfI}, initial data on the singularity give rise to
elements of $\sfS$. 
\begin{lemma}\label{lemma:x arises from Iinf}
  Let $\mfI_\infty=(G,\msH,\msK,\Phi_{0},\Phi_{1})\in\mS$; see Definition~\ref{def:sets of id on singularity}. Then there is an orthonormal basis $\{e_i\}$
  of $\mfg$ with respect to $\msH$ such that if $o$ is the commutator matrix associated with $\{e_i\}$ and $P$ is the matrix with components
  $P_{ij}=\msH(\msK e_i,e_j)$, then $x:=(o,P,\Phi_0,\Phi_1)\in\sfS$. 
\end{lemma}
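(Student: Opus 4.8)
The plan is to produce the frame by simultaneously diagonalising the matrix of $\msK$ and the commutator matrix, in exact parallel with Lemma~\ref{lemma:x arising from mfI} and the discussion at the start of Section~\ref{ssection:par reg in data}, and then to verify the algebraic clauses in Definition~\ref{def:mfS def} one at a time. The momentum constraint plays the role of (\ref{eq:macom}), and the remaining clauses are extracted from the corresponding items of Definition~\ref{def:ndvacidonbbssh}.

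\textbf{Step 1: the frame.} Fix any $\msH$-orthonormal basis of $\mfg$. By unimodularity (see the discussion after Remark~\ref{remark:unimodular}) the associated commutator matrix is symmetric, and by item (1) of Definition~\ref{def:ndvacidonbbssh} the matrix of $\msK$ in an orthonormal frame is symmetric as well. Translated to such a frame, the momentum constraint $\mathrm{div}_\msH\msK=0$ from item (2) asserts that these two matrices commute; this is the analogue of (\ref{eq:macom}) and follows from the computation of \cite[Lemma~19.13, p.~210]{RinCauchy}. Two commuting symmetric matrices are simultaneously diagonalisable by an $\msH$-orthonormal basis $\{e_i\}$, cf. \cite[Corollary~19.14, p.~211]{RinCauchy}; with respect to $\{e_i\}$ the commutator matrix is $o=\mathrm{diag}(o_1,o_2,o_3)$ and the matrix $P$ with entries $\msH(\msK e_i,e_j)$ — which is the matrix of $\msK$ since $\{e_i\}$ is orthonormal — is $P=\mathrm{diag}(p_1,p_2,p_3)$. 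Every diagonal $o$ realises one of the sign patterns singled out in Definition~\ref{def:mfS def}, so it only remains to verify the algebraic conditions on $(o,P,\Phi_0,\Phi_1)$.

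\textbf{Step 2: the algebraic conditions.} The identities $\tr P=1$ and $\tr P^2+\Phi_1^2=1$ are immediate from items (1) and (2) of Definition~\ref{def:ndvacidonbbssh}, since $\tr P=\tr\msK$ and $\tr P^2=\tr\msK^2$. For the clause $p_i=1\Rightarrow o_j=o_k$ (with $\{i,j,k\}=\{1,2,3\}$): if $1$ is an eigenvalue of $\msK$ then, as $\tr\msK=1$ and $\tr\msK^2=1-\Phi_1^2\leq 1$, the eigenvalues must be $1,0,0$, so the $1$-eigenspace is a line whose $\msH$-orthogonal complement is the $0$-eigenspace. Item (4) of Definition~\ref{def:ndvacidonbbssh} supplies an orthonormal basis $\{f_i\}$ with $\msK f_1=f_1$ for which every $\Psi_t$ is a Lie algebra automorphism; in that basis $\Psi_t$ is the rotation by angle $t$ of $\mathrm{span}(f_2,f_3)$ fixing $f_1$ and lies in $SO(3)$, so, being an automorphism, it conjugates the commutator matrix trivially, i.e.\ the commutator matrix commutes with all $\Psi_t$ and is therefore a scalar multiple of the identity on $\mathrm{span}(f_2,f_3)$. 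Since $\mathrm{span}(f_2,f_3)$ is precisely the $0$-eigenspace of $\msK$, the commutator operator is scalar there; hence in $\{e_i\}$, after relabelling so that $p_1=1$ (a permutation keeping $o$ diagonal), $o_2=o_3$. For the last clause, at most one eigenvalue can be $\leq 0$ (two would force the third $\geq 1$, contradicting $p_l<1$), so suppose $p_1\leq 0$ after relabelling; by item (3) of Definition~\ref{def:ndvacidonbbssh} and the remark following it, the $p_1$-eigenspace is the line $\mathrm{span}(e_1)$ and $\mfh:=\mathrm{span}(e_2,e_3)$ is a subalgebra. In the diagonalising basis $[e_2,e_3]=\e_{23l}\,o^{lk}e_k=o_1 e_1$, so $o_1 e_1\in\mfh$ forces $o_1=0$, while $p_2,p_3>0$ covers $i=2,3$. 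This gives $x\in\sfS$.

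The step I expect to be the main obstacle is the clause $p_i=1\Rightarrow o_j=o_k$: one must carefully reconcile the basis $\{f_i\}$ handed down by item (4) of Definition~\ref{def:ndvacidonbbssh} with the diagonalising basis $\{e_i\}$ actually used, and turn the $\Psi_t$-invariance of the commutator matrix into the assertion that its restriction to the $0$-eigenspace of $\msK$ is a scalar multiple of the identity — so that $o_2=o_3$ holds in \emph{every} such frame, not merely in $\{f_i\}$. The remaining degenerate configurations (for instance $\msK=\roId/3$, or a repeated positive eigenvalue coexisting with a non-positive one) require no work beyond the simultaneous-diagonalisation argument already described.
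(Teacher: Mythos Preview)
Your proof is correct and follows essentially the same route as the paper: simultaneous diagonalisation via the momentum constraint, then a case split on whether some $p_i=1$. The one difference is in the $p_i=1$ case: the paper simply \emph{replaces} the diagonalising basis by the basis $\{\bar e_i\}$ supplied by item~(4) of Definition~\ref{def:ndvacidonbbssh} (noting that $\bar P=\mathrm{diag}(1,0,0)$ and $\bar o=\mathrm{diag}(\bar o_1,\bar o_2,\bar o_2)$ there), whereas you keep the original diagonalising basis and argue that the scalar-on-the-$0$-eigenspace property transfers across the block-diagonal change of frame. Both work; the paper's shortcut avoids the transformation-law step you correctly flagged as the delicate point.
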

\begin{remark}\label{remark:x arising from mfIinf}
  If $x$ and $\mfI_\infty$ are related as in the statement of the lemma, we below say that $x$ arises from $\mfI_\infty$. If $P$ is related to $\mfI_\infty$
  and $\{e_i\}$ as in the statement of the lemma, we below say that $P$ is the Weingarten matrix associated with $\mfI_\infty$ and $\{e_i\}$. 
\end{remark}
\begin{remark}
  The converse statement, that elements of $\sfS$ arise from initial data on the singularity, can be found in Lemma~\ref{lemma:realcond sing}.
\end{remark}
\begin{proof}
  Let $\{e_i'\}$ be an orthonormal basis of $\mfg$ with respect to $\msH$; let $o'$ be the commutator matrix associated with $\{e_i'\}$; and let $P'$ be
  the matrix with components $P_{ij}'=\msH(\msK e_i',e_j')$. Then the condition $\rodiv_{\msH}\msK=0$ is equivalent to $o'$ and $P'$ commuting; see
  \cite[Lemma~19.13, p.~210]{RinCauchy}. Since $o'$ and $P'$ are also symmetric, this means that $o'$ and $P'$ can be simultaneously diagonalised
  by means of an orthogonal transformation. There is thus an orthonormal basis $\{e_i\}$ of $\mfg$ with respect to $\msH$ such that if $o$ is the
  corresponding commutator matrix and $P$ is the matrix with components $P_{ij}=\msH(\msK e_i,e_j)$, then $o$ and $P$ are diagonal. That
  $\tr\msK=1$ and $\tr\msK^2+\Phi_1^2=1$ hold implies that (\ref{eq:P Phi one cond on sing}) holds. Next, denote the diagonal components of $P$ by
  $p_i$ (note that the $p_i$ are the eigenvalues of $\msK$). There are two cases to consider. Either all the $p_i<1$ or there is one $p_i$ which equals $1$.

  Assume that one of the $p_i$, say $p_1$, equals $1$. Due to (\ref{eq:P Phi one cond on sing}), it follows that $p_2=p_3=0$. In particular,
  $\msK e_A=0$ for $A\in\{2,3\}$. Next, due to Definition~\ref{def:ndvacidonbbssh}, there is an orthonormal basis $\{\be_{i}\}$ of $\mfg$ with respect
  to $\msH$ such that $\msK \be_{1}=\be_{1}$ and such that if $\Psi_t$ is defined by (\ref{eq:Psit definition}) (with $e_i$ replaced by $\be_i$), then
  $\Psi_t$ is a Lie algebra isomorphism for all $t$. Clearly, $\be_1=\pm e_1$, so that $\{e_2,e_3\}$ and $\{\be_2,\be_3\}$ span the same plane. Thus
  $\msK\be_A=0$ for $A\in\{2,3\}$, and if
  $\bP$ is the matrix with components $\bP_{ij}=\msH(\msK \be_i,\be_j)$, then $\bP=\rodiag(1,0,0)$. Next, if $\bo$ is the commutator matrix associated
  with $\{\be_i\}$, then the existence of the family $\Psi_t$ of Lie algebra isomorphisms implies that $\bo=\rodiag(\bo_1,\bo_2,\bo_2)$. Thus
  $(\bo,\bP,\Phi_0,\Phi_1)\in\sfS$. 

  Next, assume that all the $p_l<1$. Then, since $\rotr P=1$, it follows that if $p_i\leq 0$ and $\{i,j,k\}=\{1,2,3\}$, then $p_j,p_k>0$. This means,
  since the $p_l$ are eigenvalues of $\msK$, that the eigenspace of $\msK$ corresponding to $p_i$ is perpendicular to the span of the eigenspaces
  corresponding to $\{p_j,p_k\}$, say $\mfh$. Due to Definition~\ref{def:ndvacidonbbssh} it follows that $\mfh$ is a subalgebra, which means that
  $[e_j,e_k]\in\mfh$. In particular, $[e_j,e_k]\perp e_i$. On the other hand $[e_j,e_k]=\pm o_ie_i$ (no summation). This means that $o_i=0$. Thus
  $(o,P,\Phi_0,\Phi_1)\in\sfS$. 
\end{proof}
Define $\sigma_\pm$ so that
\begin{equation}\label{eq:pi ito sp sm}
  p_1=\tfrac{1}{3}-\tfrac{2}{3}\sigma_+,\ \ \
  p_2=\tfrac{1}{3}+\tfrac{1}{3}\sigma_++\tfrac{1}{\sqrt{3}}\sigma_-,\ \ \
  p_3=\tfrac{1}{3}+\tfrac{1}{3}\sigma_+-\tfrac{1}{\sqrt{3}}\sigma_-;
\end{equation}
i.e., let $\sigma_+:=(1-3p_1)/2$ and $\sigma_-=\sqrt{3}(p_2-p_3)/2$. Then the first equality in (\ref{eq:P Phi one cond on sing}) is automatically
satisfied, and the second reads
\begin{equation}\label{eq:mfE def}
  \sigma_+^2+\sigma_-^2+\tfrac{3}{2}\Phi_1^2=1.
\end{equation}
In particular, (\ref{eq:P Phi one cond on sing}) defines a smooth manifold. Next, note that the maps introduced in (\ref{eq:psi sigma pm})
leave $\sfS_\mfT$ invariant. We can therefore also think of $\psi_\sigma^+$ as a map from $\sfS$ to $\sfS$ and $\psi_\sigma^-$ as a map from
$\sfS-\sfS_\mrI$ to $\sfS-\sfS_\mrI$. Strictly speaking, we should introduce different notation for these maps. However, for the sake of brevity,
we use the same notation and assume the domain of definition to be clear from the context. We also use the notation introduced in
Definition~\ref{def:Gamma plus etc}. As before, $\psi_{\roId}^+$ is the identity and $\psi_{\roId}^-$ has no fixed points.
\begin{lemma}\label{lemma:psi sigma plus isotropic sing}
  Assume $\sigma\in S_3$ to have no fixed points (i.e., $\sigma$ is even and non-trivial). Then $\psi_\sigma^-$, considered as a map from
  $\sfS-\sfS_\mrI$ to $\sfS-\sfS_\mrI$, has no fixed points. Moreover, if
  \[
    x=(o_1,o_2,o_3,p_1,p_2,p_3,\Phi_0,\Phi_1)\in\sfS
  \]
  and $\psi_\sigma^+(x)=x$,
  then all the $o_i$ equal and all the $p_i$ equal $1/3$, so that $\psi_\lambda^+(x)=x$ for all $\lambda\in S_3$. In particular, if $x\in\sfS_\mfT$
  arises from $\mfI_\infty\in\mS_\mfT$, cf. Remark~\ref{remark:x arising from mfIinf}, and is a fixed point of $\psi_\sigma^+$, then $\mfI_\infty$ are
  isotropic. Finally, let $\mfI_\infty\in\mS_{\mfT}^{\iso}$ and assume that $x\in\sfS$ arises from $\mfI_\infty$, cf.
  Remark~\ref{remark:x arising from mfIinf}. Then $\psi_\lambda^+(x)=x$ for all $\lambda\in S_3$.
\end{lemma}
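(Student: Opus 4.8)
The plan is to mirror, in the "data on the singularity" setting, the structure of the proof of Lemma~\ref{lemma:psi sigma plus isotropic}, using (\ref{eq:pi ito sp sm}) and (\ref{eq:mfE def}) as the analogue of (\ref{eq:ham cons fixed theta}) in place of the Hamiltonian constraint, and (\ref{eq:Riccidiagitocm iso}) again to handle the curvature side. The hypothesis on $\sigma$ means there are $\{i,j,l\}=\{1,2,3\}$ with $\sigma(i)=j$, $\sigma(j)=l$, $\sigma(l)=i$.

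First I would dispose of $\psi_\sigma^-$: a fixed point $x$ of $\psi_\sigma^-$ on $\sfS-\sfS_\mrI$ would satisfy $o_i=-o_j$, $o_j=-o_l$, $o_l=-o_i$, hence $o=0$, so $x\in\sfS_\mrI$, contradicting that $x$ lies outside the domain $\sfS-\sfS_\mrI$. Thus $\psi_\sigma^-$ has no fixed points. Next, if $\psi_\sigma^+(x)=x$ with $x=(o_1,o_2,o_3,p_1,p_2,p_3,\Phi_0,\Phi_1)\in\sfS$, then $o_j=o_i$, $o_l=o_j$, $o_i=o_l$ force all $o_m$ equal, and similarly all $p_m$ equal; since $\rotr P=1$ by (\ref{eq:P Phi one cond on sing}), the common value is $1/3$. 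As all components are then permutation-invariant, $\psi_\lambda^+(x)=x$ for every $\lambda\in S_3$.

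For the third assertion, suppose $x\in\sfS_\mfT$ arises from $\mfI_\infty=(G,\msH,\msK,\Phi_0,\Phi_1)\in\mS_\mfT$ via an orthonormal basis $\{e_a\}$ with commutator matrix $o$ and Weingarten matrix $P=\rodiag(1/3,1/3,1/3)$; then $\msK=\roId/3$, and since all $o_a$ are equal, (\ref{eq:Riccidiagitocm iso}) (which is valid for the commutator matrix of any left invariant metric) gives $\bR_a=-\tfrac12(o_b-o_c)^2+\tfrac12 o_a^2=\tfrac12 o_a^2$ independent of $a$; so the Ricci tensor of $\msH$ is a multiple of $\msH$, i.e. $\msH$ has constant curvature, and $\mfI_\infty$ is isotropic by Definition~\ref{def:isotropic}. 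Conversely, if $\mfI_\infty\in\mS_{\mfT}^{\iso}$ and $x\in\sfS$ arises from $\mfI_\infty$ via $\{e_a\}$, then $\msK=\roId/3$ forces $P=\rodiag(1/3,1/3,1/3)$, so all $p_a$ are equal; since the Ricci tensor of $\msH$ is a multiple of $\msH$, (\ref{eq:Riccidiagitocm iso}) forces all $\bR_a$ equal, and as in the proof of Lemma~\ref{lemma:psi sigma plus isotropic} the only two options are that all $o_a$ coincide or that two coincide and the third vanishes; the latter is isotropic Bianchi type $\mrVIIz$, excluded from $\mS$ by Definition~\ref{def:sets of id on singularity}. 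Hence all $o_a$ coincide, so $x$ is invariant under all permutations, i.e. $\psi_\lambda^+(x)=x$ for all $\lambda\in S_3$.

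The only mild obstacle is bookkeeping: checking that (\ref{eq:Riccidiagitocm iso}) applies verbatim here (it is a statement about left invariant metrics on unimodular Lie groups, so it does, via \cite[Lemma~19.11, p.~209]{RinCauchy}), and confirming that the exclusion of isotropic Bianchi type $\mrVIIz$ data in Definition~\ref{def:sets of id on singularity} is exactly what rules out the degenerate commutator configuration; both are routine given the earlier development. There is no analytic difficulty here — the argument is purely the permutation-group and linear-algebra bookkeeping already carried out in the regular-data case.
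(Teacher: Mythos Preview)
Your proposal is correct and follows exactly the route the paper intends: the paper's own proof simply states that the argument is almost identical to that of Lemma~\ref{lemma:psi sigma plus isotropic} and leaves the details to the reader, and you have supplied precisely those details, including the use of $\rotr P=1$ to pin down $p_i=1/3$ and the exclusion of the isotropic Bianchi type $\mrVIIz$ case via Definition~\ref{def:sets of id on singularity}.
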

\begin{proof}
  The proof is almost identical to the proof of Lemma~\ref{lemma:psi sigma plus isotropic}. The details are left to the reader.
\end{proof}

\begin{lemma}\label{lemma:sfB char per symm VIz sing}
  Let $\sigma\in S_3$ be a transposition and $x\in\sfS-\sfS_\mrI$ be a fixed point of $\psi_\sigma^-$. Then $x\in\sfS_{\mrVIz}$.
  In fact, there are $\{i,j,l\}=\{1,2,3\}$ such that if $x=(o_1,o_2,o_3,p_1,p_2,p_3,\Phi_0,\Phi_1)$, then $o_i=0$, $o_j=-o_l\neq 0$
  and $p_j=p_l$. In particular, if $x\in\sfS_{\mrVIz}$ arises from $\mfI_\infty\in\mS_{\mrVIz}$, cf.
  Remark~\ref{remark:x arising from mfIinf}, and is a fixed point of $\psi_\sigma^-$, then $\mfI_\infty$ are permutation symmetric. Finally, let
  $\mfI_\infty\in\mS_{\mrVIz}^{\roper}$. Then there is an $x\in\sfS_{\mrVIz}$ arising from $\mfI_\infty$, cf. Remark~\ref{remark:x arising from mfIinf},
  and a transposition $\sigma$ such that $\psi_\sigma^-(x)=x$.
\end{lemma}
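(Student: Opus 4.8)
The statement is the exact analogue, in the setting of data on the singularity, of Lemma~\ref{lemma:sfB char per symm VIz}, with $(\bn,K)$ replaced by $(o,P)$ and the algebraic constraint (\ref{eq:nuvhc}) replaced by (\ref{eq:P Phi one cond on sing}). Since neither of those constraints plays a role in the relevant computations, the plan is to mimic the proof of Lemma~\ref{lemma:sfB char per symm VIz} essentially verbatim, splitting the statement into its three parts. First, for a transposition $\sigma\in S_3$, fix $\{i,j,l\}=\{1,2,3\}$ with $\sigma$ fixing $i$ and interchanging $j$ and $l$. Unwinding the definition (\ref{eq:psi sigma pm}) of $\psi_\sigma^-$, a fixed point $x=(o_1,o_2,o_3,p_1,p_2,p_3,\Phi_0,\Phi_1)$ satisfies $o_i=-o_i$, $o_j=-o_l$, $p_j=p_l$ (and $p_i=p_i$, $\Phi_0,\Phi_1$ unchanged). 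Hence $o_i=0$; and since $x\notin\sfS_\mrI$, not all $o_m$ vanish, so $o_j=-o_l$ are both nonzero with opposite signs. By Definition~\ref{def:mfS def} this says precisely $x\in\sfS_{\mrVIz}$, giving the first two sentences.

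Second, suppose $x\in\sfS_{\mrVIz}$ arises from $\mfI_\infty=(G,\msH,\msK,\Phi_0,\Phi_1)\in\mS_{\mrVIz}$ via an orthonormal frame $\{e_m\}$ as in Remark~\ref{remark:x arising from mfIinf}, and is a fixed point of $\psi_\sigma^-$. From the previous paragraph $o_i=0$, $o_j=-o_l$ and $p_j=p_l$. The relation $p_j=p_l$ means $\msH(\msK e_j,e_j)=\msH(\msK e_l,e_l)$, and since $P$ is diagonal, $\msK$ acts as $p_j=p_l$ on $\mathrm{span}\{e_j,e_l\}$ and as $p_i$ on $e_i$. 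Define $\Psi\in\mathrm{GL}(\mfg)$ by $\Psi e_i=e_i$, $\Psi e_j=e_l$, $\Psi e_l=e_j$ (this is exactly (\ref{eq:Psi definition}) with the indices relabelled). Then $\Psi$ preserves $\msH$ (it permutes an orthonormal frame) and commutes with $\msK$ (it preserves each eigenspace of $\msK$), so $\msK\Psi=\Psi\msK$; and the commutator relations $[e_j,e_l]=o_ie_i=0$, $[e_i,e_j]=\g^k_{ij}e_k$ together with the diagonality of $o$ and $o_j=-o_l$ show that $\Psi$ is a Lie algebra isomorphism, by the same bookkeeping with $\g^k_{ij}=\e_{ijk}o_k$ (no sum) as in the proof of Lemma~\ref{lemma:sfB char per symm VIz}. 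By Definition~\ref{def:isotropic}, $\mfI_\infty$ then has a permutation symmetry. (One should also note $\mfI_\infty$ is not isotropic, as $o$ is not a multiple of a matrix with all equal diagonal entries --- indeed $o_i=0\neq o_j$ --- which is automatic here.)

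Third, conversely let $\mfI_\infty\in\mS_{\mrVIz}^{\roper}$, so by Definition~\ref{def:isotropic} there is a Lie algebra isomorphism $\Psi$ and an orthonormal frame $\{e_m'\}$ with respect to $\msH$ satisfying (\ref{eq:Psi definition}) and $\msK\Psi=\Psi\msK$. I would follow the last paragraph of the proof of Lemma~\ref{lemma:sfB char per symm VIz}: let $o'$ be the commutator matrix and $P'$ the Weingarten matrix of $\mfI_\infty$ in the frame $\{e_m'\}$. The form (\ref{eq:Psi definition}) of $\Psi$ and invariance of the commutator structure force $o'^{11}=0$, $o'^{13}=-o'^{12}$, $o'^{23}=0$, $o'^{33}=-o'^{22}$; similarly $\msK\Psi=\Psi\msK$ forces $P'_{12}=P'_{13}$, $P'_{22}=P'_{33}$. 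The eigenvalues of $o'$ are $0,\lambda_+,-\lambda_+$ with $\lambda_+>0$ (since $\mfg$ is of type $\mrVIz$, $o'\neq 0$ and $\mathrm{tr}\,o'=0$, and the trace-free symmetric matrix $o'$ with the above pattern has this spectrum); let $v_0,v_\pm$ be corresponding orthonormal eigenvectors. Because $\mathrm{div}_{\msH}\msK=0$, i.e. $o'$ and $P'$ commute (by \cite[Lemma~19.13, p.~210]{RinCauchy}, cf. the proof of Lemma~\ref{lemma:x arises from Iinf}), the $v_0,v_\pm$ are also eigenvectors of $\msK$; and a short computation with the relations above (determining $v_0$ explicitly in terms of $o'^{12},o'^{22}$ and using $P'_{12}=P'_{13}$, $P'_{22}=P'_{33}$) shows $\msK$ is a multiple of the identity on $\mathrm{span}\{v_+,v_-\}$, so the two nonzero eigenvalues of $\msK$ there coincide. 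Relabelling $\{v_+,v_0,v_-\}$ (say $\be_1:=v_0$, $\be_2:=v_+$, $\be_3:=v_-$, after possibly rescaling to unit length --- they are already orthonormal) yields an orthonormal frame $\{\be_m\}$ in which the commutator matrix is $\mathrm{diag}(0,\nu_0,-\nu_0)$, $\nu_0>0$, and the Weingarten matrix is $\mathrm{diag}(p_1,p_2,p_2)$. Setting $x:=(\mathrm{diag}(0,\nu_0,-\nu_0),\mathrm{diag}(p_1,p_2,p_2),\Phi_0,\Phi_1)$, one checks $x\in\sfS_{\mrVIz}$ (the conditions in Definition~\ref{def:mfS def} involving $p_i=1$ and $o_i\neq 0$ either do not apply or are inherited from $\mfI_\infty\in\mS$), that $x$ arises from $\mfI_\infty$, and that the transposition $\sigma$ fixing $1$ and swapping $2,3$ satisfies $\psi_\sigma^-(x)=x$, since $\psi_\sigma^-$ sends $(0,\nu_0,-\nu_0,p_1,p_2,p_2,\Phi_0,\Phi_1)$ to $(-0,-(-\nu_0),-\nu_0,p_1,p_2,p_2,\Phi_0,\Phi_1)=(0,\nu_0,-\nu_0,p_1,p_2,p_2,\Phi_0,\Phi_1)$.

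\textbf{Main obstacle.} The only genuinely non-routine step is the converse, and within it the verification that $\msK$ restricted to the $\lambda_\pm$-eigenspace of $o'$ is scalar; this is the linear-algebra heart of the argument and is where one must combine the coefficient relations coming from (\ref{eq:Psi definition}) with the commutation $[o',P']=0$. Everything else is a transcription of the corresponding steps in the proof of Lemma~\ref{lemma:sfB char per symm VIz}, with the momentum-constraint-commutation fact supplied by \cite[Lemma~19.13, p.~210]{RinCauchy} exactly as in Lemma~\ref{lemma:x arises from Iinf}; accordingly I would, as the paper does for Lemma~\ref{lemma:psi sigma plus isotropic sing}, state that the proof is ``almost identical to the proof of Lemma~\ref{lemma:sfB char per symm VIz}'' and leave the routine details to the reader, spelling out only the eigenspace computation if space permits.
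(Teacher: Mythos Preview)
Your proposal is correct and follows essentially the same route as the paper: reduce the first two statements to the argument of Lemma~\ref{lemma:sfB char per symm VIz}, and for the converse, take the frame guaranteed by the permutation symmetry, use $\mathrm{div}_{\msH}\msK=0$ to get $[o',P']=0$, and simultaneously diagonalise. The only place the paper is more explicit than you is in verifying $x\in\sfS$ at the end: rather than saying the extra conditions in Definition~\ref{def:mfS def} ``either do not apply or are inherited,'' the paper notes that $1$ cannot be an eigenvalue of $\msK$ (since in Bianchi type $\mrVIz$ the condition $o_j=o_k$ forced by Definition~\ref{def:mfS def} would fail) and that the positivity condition $p_i>0$ when $o_i\neq 0$ follows from the subalgebra condition in Definition~\ref{def:ndvacidonbbssh}, both via the arguments in the proof of Lemma~\ref{lemma:x arises from Iinf}.
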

\begin{proof}
  All the statements but the last one follow by arguments that are almost identical to the corresponding arguments in the proof
  of Lemma~\ref{lemma:sfB char per symm VIz}. In order to justify the last statement, let
  $\mfI_\infty=(G,\msH,\msK,\Phi_0,\Phi_1)\in\mS_{\mrVIz}^{\roper}$. We then wish to prove that there is an $x$ as in the last statement of the lemma.
  Let, to this end, $\Psi$ be a Lie algebra isomorphism and $\{e_{i}\}$ be an orthonormal basis of $\mfg$ with respect to $\msH$ such that
  (\ref{eq:Psi definition}) and $\msK\Psi=\Psi\msK$ hold; cf. Definition~\ref{def:isotropic}. Let $o$ be the commutator matrix associated with
  $\{e_i\}$ and let $P$ be the Weingarten matrix associated with $\mfI_\infty$ and $\{e_i\}$. That $o$ and $P$ commute follows from
  $\rodiv_{\msH}\msK=0$; cf. the proof of Lemma~\ref{lemma:x arises from Iinf}. Given this information, the proof of the last statement of the
  lemma is essentially identical to the
  proof of the corresponding statement in Lemma~\ref{lemma:sfB char per symm VIz}. Let us, however, note that $1$ cannot be an eigenvalue of
  $\msK$; this is not compatible with Bianchi type $\mrVIz$ due to the arguments presented in the proof of Lemma~\ref{lemma:x arises from Iinf}.
  Moreover, the final condition for membership in $\sfS$ is fulfilled due to the last part of the proof of Lemma~\ref{lemma:x arises from Iinf}.
\end{proof}
\begin{lemma}\label{lemma:sfB char per symm VIIz sing}
  Let $\sigma\in S_3$ be a transposition and $x\in\sfS$ be a fixed point of $\psi_\sigma^+$. Then there are
  $\{i,j,l\}=\{1,2,3\}$ such that if $x=(o_1,o_2,o_3,p_1,p_2,p_3,\Phi_0,\Phi_1)$, then $o_j=o_l$ and $p_j=p_l$. In particular, if
  $x\in\sfS_{\mfT}$ arises from $\mfI_\infty\in\mS_{\mfT}$, cf. Remark~\ref{remark:x arising from mfIinf}, is a fixed point of $\psi_\sigma^+$, but is not
  a fixed point of all the maps in $\Gamma^+$, then $\mfI_\infty$ are LRS. Finally, let $\mfI\in\mS_{\mfT}^{\roLRS}$. Then there is an
  $x\in\sfS_{\mfT}$ arising from $\mfI_\infty$, cf. Remark~\ref{remark:x arising from mfIinf}, which is not a fixed point of all the maps in $\Gamma^+$,
  and a transposition $\sigma$ such that $\psi_\sigma^+(x)=x$.
\end{lemma}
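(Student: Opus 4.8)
The plan is to follow the template of Lemma~\ref{lemma:sfB char per symm VIIz}, since the statement here is the exact analogue for data on the singularity. First I would handle the fixed-point characterisation: let $\sigma$ be the transposition fixing $i$ and swapping $j$ and $l$. Then $\psi_\sigma^+(x)=x$ forces $o_j=o_l$ and $p_j=p_l$, which is immediate from the definition (\ref{eq:psi sigma pm}) of $\psi_\sigma^+$ (note $\psi_\sigma^+$ applies $\sigma$ to both the $o$ and $p$ slots, leaving $\Phi_0,\Phi_1$ alone). That gives the first assertion. For the "in particular'' clause: suppose $x\in\sfS_\mfT$ arises from $\mfI_\infty=(G,\msH,\msK,\Phi_0,\Phi_1)\in\mS_\mfT$ via an orthonormal frame $\{e_i\}$ as in Lemma~\ref{lemma:x arises from Iinf}, with $x$ a fixed point of $\psi_\sigma^+$. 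Then $p_j=p_l$ means two eigenvalues of $\msK$ coincide. I would then invoke the definition of the family $\Psi_t$ (\ref{eq:Psit definition}) built from the frame obtained by relabelling so that $e_i$ plays the role of the distinguished axis $e_1$: since $p_j=p_l$, the rotation $\Psi_t$ in the $(e_j,e_l)$-plane satisfies $\msK\Psi_t=\Psi_t\msK$ automatically. The question is whether $\Psi_t$ is a Lie algebra isomorphism, i.e. whether it preserves the commutator matrix; since $o_j=o_l$ and $o_i$ is the remaining component, a rotation in the $(e_j,e_l)$-plane does preserve $o=\mathrm{diag}$ with $o_j=o_l$ (this is the standard computation that a rotation preserving a diagonal commutator matrix with two equal entries remains an automorphism). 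So $\{\msK\Psi_t=\Psi_t\msK, \Psi_t \text{ an isomorphism}\}$ both hold, hence $\mfI_\infty$ is either isotropic or LRS by Definition~\ref{def:isotropic}. If it were isotropic, then by Lemma~\ref{lemma:psi sigma plus isotropic sing} the point $x$ would be fixed by all of $\Gamma^+$, contrary to hypothesis; therefore $\mfI_\infty$ is LRS.

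For the final ("Finally'') assertion: start from $\mfI_\infty=(G,\msH,\msK,\Phi_0,\Phi_1)\in\mS_\mfT^{\roLRS}$. By Definition~\ref{def:isotropic} there is a family of Lie algebra isomorphisms $\Psi_t$ and an orthonormal frame $\{e_i\}$ with respect to $\msH$ such that (\ref{eq:Psit definition}) holds and $\msK\Psi_t=\Psi_t\msK$. Since $\msK$ commutes with every $\Psi_t$ and the $\Psi_t$ act as rotations fixing $e_1$ and rotating the $(e_2,e_3)$-plane, $e_1$ is an eigenvector of $\msK$ and $\msK$ restricted to the $(e_2,e_3)$-plane commutes with all planar rotations, hence is a scalar there; so $\msK$ is diagonal in $\{e_i\}$ with $p_2=p_3$. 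The commutator matrix $o$ of $\{e_i\}$: invariance under the automorphisms $\Psi_t$ forces $o$ to be diagonal with $o_2=o_3$ (again the standard Bianchi-frame computation; compare the argument for the Weingarten/commutator matrices in the proof of Lemma~\ref{lemma:sfB char per symm VIIz}, and the structure-constant computation in Lemma~\ref{lemma:x arises from Iinf}). One must also check $x:=(o,P,\Phi_0,\Phi_1)\in\sfS$, i.e. the membership conditions of Definition~\ref{def:mfS def}: $\mathrm{tr}\,P=1$ and $\mathrm{tr}\,P^2+\Phi_1^2=1$ come from $\mathrm{tr}\,\msK=1$ and $\mathrm{tr}\,\msK^2+\Phi_1^2=1$; the conditions relating $o$ and $P$ are exactly conditions (3) and (4) of Definition~\ref{def:ndvacidonbbssh}, verified just as in the proof of Lemma~\ref{lemma:x arises from Iinf} (split according to whether $1$ is an eigenvalue of $\msK$). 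Then with $\sigma$ the transposition swapping $2$ and $3$ we have $\psi_\sigma^+(x)=x$ since $o_2=o_3$, $p_2=p_3$. Finally, $x$ is not fixed by all of $\Gamma^+$: if it were, then by Lemma~\ref{lemma:psi sigma plus isotropic sing} all $o_i$ would be equal and all $p_i=1/3$, forcing $\msK=\roId/3$ and $\msH$ of constant curvature (the latter from the commutator matrix being a multiple of a fixed one, via the analogue of (\ref{eq:Riccidiagitocm iso}) for the singular data — or more directly, isotropy of the commutator matrix plus $\msK=\roId/3$ contradicts $\mfI_\infty$ being LRS rather than isotropic). Hence $x$ has the required properties.

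I expect the only genuinely delicate point to be the verification that a rotation $\Psi_t$ in the plane of the two equal eigenvalues is a Lie algebra isomorphism, i.e. that $o_2=o_3$ (and the remaining component unconstrained) is exactly the condition making such a rotation an automorphism — and, conversely, that an LRS frame forces $o_2=o_3$. This is the same structure-constant bookkeeping used for the Weingarten and commutator matrices in Lemma~\ref{lemma:sfB char per symm VIIz} and in Lemma~\ref{lemma:x arises from Iinf}, so I would simply cite those computations and note that the argument is identical, leaving the routine details to the reader. The membership-in-$\sfS$ checks, split into the cases "$1$ is an eigenvalue of $\msK$'' and "all eigenvalues $<1$'', are also a verbatim repeat of the corresponding part of the proof of Lemma~\ref{lemma:x arises from Iinf}, and the isotropy-exclusion step is a direct appeal to Lemma~\ref{lemma:psi sigma plus isotropic sing}. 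So the proof can be written very compactly: "The argument is almost identical to the proof of Lemma~\ref{lemma:sfB char per symm VIIz}, using Lemmas~\ref{lemma:x arises from Iinf} and \ref{lemma:psi sigma plus isotropic sing} in place of their regular-data counterparts; we leave the details to the reader,'' followed by the one remark worth spelling out, namely the construction of $x$ from the LRS frame and the check that $x\notin\sfS^{\iso}$.
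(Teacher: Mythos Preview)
Your proposal is correct and follows exactly the approach the paper takes: the paper's proof consists of the single sentence ``The proof is almost identical to the proof of Lemma~\ref{lemma:sfB char per symm VIIz}. The details are left to the reader,'' and your expansion is precisely what that deferred argument would be, invoking Lemmas~\ref{lemma:x arises from Iinf} and \ref{lemma:psi sigma plus isotropic sing} in place of their regular-data counterparts.
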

\begin{remark}
  An observation analogous to Remark~\ref{remark:sigma but not lambda} holds in this case as well.
\end{remark}
\begin{proof}
  The proof is almost identical to the proof of Lemma~\ref{lemma:sfB char per symm VIIz}. The details are left to the reader.
\end{proof}
In analogy with Definition~\ref{def:BTs GTs}, we introduce the following terminology. 
\begin{definition}\label{def:STs GTs}
  Let $\mfT$ be a Bianchi class A type and recall the notation introduced in Definition~\ref{def:Gamma plus etc}. Define $\sfS_\mfT^{\iso}$ to be the
  subset of $\sfS_\mfT$ invariant under $\Gamma^+$. Define $\sfS_{\mrI}^{\roper}=\varnothing$ and define, for $\mfT\neq\mrI$, $\sfS_{\mfT}^{\roper}$ to
  be the set of $p\in\sfS_{\mfT}$ such that there is a map $\psi_\sigma^-\in \Gamma^{-,\roodd}$ with $\psi_\sigma^-(p)=p$. Define $\sfS_{\mfT}^{\roLRS}$ to be the
  set of $p\in\sfS_{\mfT}-\sfS_{\mfT}^{\iso}$ such that there is a map $\psi_\sigma^+\in \Gamma^{+,\roodd}$ with $\psi_\sigma^+(p)=p$. Define
  \[
  \sfS_{\mfT}^{\rogen}:=\sfS_{\mfT}-\sfS_{\mfT}^{\iso}-\sfS_{\mfT}^{\roper}-\sfS_{\mfT}^{\roLRS}.
  \]
  \index{$\a$Aa@Notation!Symmetry reduced sets of initial data on the singularity!sfSTs@$\sfS_\mfT^\mfs$}%
\end{definition}
\begin{remark}
  Note that $\sfS_\mfT^{\iso}$ is empty unless $\mfT\in \{\mrI,\mrIX\}$; $\sfS_\mfT^{\roper}$ is empty unless $\mfT=\mrVIz$;
  and $\sfS_{\mrVIz}^{\roLRS}$ is empty.
\end{remark}
In analogy with Lemma~\ref{lemma:sfR mfT mfs}, the following statement holds. 
\begin{lemma}\label{lemma:sing mfT mfs}
  Let $\mfT$ be a Bianchi class A type and $\mfs\in\{\iso,\roper,\roLRS,\rogen\}$ be such that $\sfS_{\mfT}^{\mfs}$ and $\Gamma_{\mfT}^{\mfs}$,
  see Definitions~\ref{def:STs GTs} and \ref{def:BTs GTs}, are defined and non-empty. Then $\sfS_{\mfT}^{\mfs}$ is a smooth manifold,
  $\Gamma_{\mfT}^{\mfs}$ acts freely and properly discontinuously on $\sfS_{\mfT}^{\mfs}$ and the corresponding quotient is denoted $\sfU_{\mfT}^{\mfs}$.
  \index{$\a$Aa@Notation!Symmetry reduced sets of initial data on the singularity!sfUTs@$\sfU_\mfT^\mfs$}%
\end{lemma}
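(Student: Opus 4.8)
The plan is to mirror, essentially verbatim, the structure of the proof of Lemma~\ref{lemma:sfR mfT mfs}, replacing the role of $\sfB$ and its subsets by $\sfS$ and its subsets. First I would establish that $\sfS_\mfT^\mfs$ is a smooth manifold. The starting point is the observation made just before the statement: using the parametrisation (\ref{eq:pi ito sp sm}), the condition (\ref{eq:P Phi one cond on sing}) is equivalent to (\ref{eq:mfE def}), namely $\sigma_+^2+\sigma_-^2+\tfrac{3}{2}\Phi_1^2=1$, which cuts out a sphere in the $(\sigma_+,\sigma_-,\Phi_1)$-variables and hence a smooth manifold; together with the fact that the $o_i$ are free except for the sign/vanishing constraints defining the Bianchi type $\mfT$, and that those constraints are open conditions on the $o_i$ (non-vanishing, prescribed signs) or linear subspace conditions (vanishing), one concludes $\sfS_\mfT$ is a smooth manifold. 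The symmetry subsets $\sfS_\mfT^\iso$, $\sfS_\mfT^\roper$, $\sfS_\mfT^\roLRS$ are each given by imposing further linear equalities among the $o_i$ and among the $p_i$ (as described in Lemmas~\ref{lemma:psi sigma plus isotropic sing}, \ref{lemma:sfB char per symm VIz sing}, \ref{lemma:sfB char per symm VIIz sing}), so they are submanifolds; $\sfS_\mfT^\rogen$ is the complement of the closed lower-symmetry strata inside $\sfS_\mfT$, hence open in $\sfS_\mfT$ and thus a manifold.

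Next I would verify that $\Gamma_\mfT^\mfs$ acts freely on $\sfS_\mfT^\mfs$. This is exactly the same bookkeeping as in the proof of Lemma~\ref{lemma:sfR mfT mfs}, now using the singularity versions of the fixed-point lemmas. Concretely: $\psi_{\roId}^+$ is the identity; $\psi_{\roId}^-$ has no fixed points on $\sfS-\sfS_\mrI$ by definition of its domain; if $\sigma$ is a non-trivial even permutation then $\psi_\sigma^-$ has no fixed points and $\psi_\sigma^+$ fixes only points that are invariant under all of $\Gamma^+$, i.e. isotropic points (Lemma~\ref{lemma:psi sigma plus isotropic sing}); and if $\sigma$ is a transposition, the fixed-point sets of $\psi_\sigma^-$ and $\psi_\sigma^+$ lie in $\sfS_{\mrVIz}^\roper$ and $\sfS_\mfT^\roLRS\cup\sfS_\mfT^\iso$ respectively (Lemmas~\ref{lemma:sfB char per symm VIz sing} and \ref{lemma:sfB char per symm VIIz sing}). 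Matching these against Definition~\ref{def:BTs GTs}: on $\sfS_\mfT^\iso$ the group $\Gamma_\mfT^\iso$ is $\{\roId\}$ (or $\{\roId,\psi^-_{\roId}\}$ for type IX, and $\psi^-_{\roId}$ has no fixed points since $\sfS_\mrIX\cap\sfS_\mrI=\varnothing$); on $\sfS_{\mrVIz}^\roper$ the group $\Gamma_{\mrVIz}^\roper=\Gamma^\roev$ consists of the identity, the $\psi_\sigma^+$ with $\sigma$ even non-trivial (which would force isotropy, excluded), and the $\psi_\sigma^-$ with $\sigma$ a transposition — but a point of $\sfS_{\mrVIz}^\roper$ fixed by two distinct such transpositions would be invariant under a non-trivial even element hence isotropic, again excluded; similarly on $\sfS_\mfT^\roLRS$ with $\Gamma_\mfT^\roLRS$; and on $\sfS_\mfT^\rogen$ the full relevant group $\Gamma$ (or $\Gamma^+$ for type I) acts with no fixed points at all because all potential fixed points were removed in passing from $\sfS_\mfT$ to $\sfS_\mfT^\rogen$. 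In each case $\Gamma_\mfT^\mfs$ is a finite group acting freely, and a finite group acting freely on a manifold acts properly discontinuously, so the quotient $\sfU_\mfT^\mfs$ is a smooth manifold.

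The only genuinely substantive point, and the one I expect to be the main obstacle to write carefully, is the smooth-manifold claim for $\sfS_\mfT^\mfs$ itself — specifically checking that the defining equation (\ref{eq:P Phi one cond on sing}) has non-vanishing differential on $\sfS_\mfT$, i.e. that the analogue of the ``trivial data'' degenerate point which had to be excluded in the $\sfB$-case does not intrude here. Here the situation is actually cleaner: after the change of variables (\ref{eq:pi ito sp sm}) the trace condition $\tr P=1$ holds identically, and the remaining condition $\sigma_+^2+\sigma_-^2+\tfrac32\Phi_1^2=1$ has gradient $(2\sigma_+,2\sigma_-,3\Phi_1)$ in the $(\sigma_+,\sigma_-,\Phi_1)$ coordinates, which vanishes only at the origin — but the origin does not satisfy the equation, so the gradient never vanishes on the level set. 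Hence $\sfS_\mfT$ is cut out transversally and is smooth; intersecting with the (transverse, since they are coordinate-linear) symmetry conditions keeps it smooth, and taking the open complement for $\mfs=\rogen$ is harmless. With this in hand the rest is the routine finite-group-quotient argument above, and I would simply remark that the details not spelled out are identical to those in the proof of Lemma~\ref{lemma:sfR mfT mfs}.
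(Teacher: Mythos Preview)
Your freeness argument is essentially correct and parallels the paper (with one slip: in the $\roper$ case you list ``$\psi_\sigma^-$ with $\sigma$ a transposition'' among the elements of $\Gamma^{\roev}$, but transpositions are odd; the non-identity elements of $\Gamma^{\roev}$ are $\psi_{\roId}^-$ and $\psi_\sigma^\pm$ with $\sigma$ a $3$-cycle, and these are disposed of by Lemma~\ref{lemma:psi sigma plus isotropic sing}).

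The genuine gap is in the smooth-manifold claim, and it is not where you locate it. You correctly observe that the sphere constraint~(\ref{eq:mfE def}) is everywhere regular, and from this you conclude that $\sfS_\mfT$ is a smooth manifold. But Definition~\ref{def:mfS def} imposes two further conditions you do not mention: (a) if $p_i=1$ then $o_j=o_k$; and (b) if all $p_l<1$ and $o_i\neq 0$ then $p_i>0$. These are not Bianchi-type conditions on the $o_i$, and they are not globally of the form ``linear subspace'' or ``open set''; they are conditional constraints that change character depending on where one is on the $(\sigma_+,\sigma_-,\Phi_1)$-sphere. In particular, for $\mfT\in\{\mrVIII,\mrIX\}$ the set $\sfS_\mfT$ contains both a top-dimensional piece (over the open region $0<p_i<1$, with the $o_i$ free) and lower-dimensional pieces sitting over the three special points $p_i=1$ (where $o_j=o_k$ is forced), so your argument does not establish that $\sfS_\mfT$ itself is a manifold, and the subsequent ``intersect with linear symmetry conditions'' step inherits this problem.

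The paper therefore does \emph{not} mirror Lemma~\ref{lemma:sfR mfT mfs}; instead it works directly with $\sfS_\mfT^\mfs$ and checks, symmetry type by symmetry type, that conditions (a) and (b) either become void, are automatically satisfied by the symmetry constraint, or reduce to genuinely open conditions. For example: in the LRS case with $p_2=p_3$ and $o_2=o_3$, condition (a) at $p_1=1$ is automatic; in the generic case, any point with $p_i=1$ is forced by (a) to satisfy $o_j=o_k$ and $p_j=p_k=0$, hence to be LRS, so such points are already excluded and only the open condition from (b) remains. This case analysis is precisely what is new relative to Lemma~\ref{lemma:sfR mfT mfs}, because the definition of $\sfB$ has no analogue of conditions (a) and (b).
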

\begin{proof}
  If $\mfs=\iso$, then the second last condition in the list of requirements for membership in $\sfS$ is void and the last is automatically
  satisfied. Moreover, $P=\roId/3$, $\Phi_0$ is an arbitrary element of $\ro$ and $\Phi_1\in\{-(2/3)^{1/2},(2/3)^{1/2}\}$. If $\mfT=\mrI$,
  then $o=0$ and if $\mfT=\mrIX$, then $o=\nu_0\roId$ for some $0\neq\nu_0\in\ro$. Thus $\sfS_\mfT^{\iso}$ are smooth manifolds. Moreover, the
  groups $\Gamma_\mfT^{\iso}$ act freely and properly discontinuosly on them.

  Next, let $\mfs=\roper$. Then $\mfT=\mrVIz$ and $\sfS_{\mrVIz}^{\roper}$ is the union of three sets, corresponding to which $o_i$ vanishes, such
  that the closure of each of these sets has empty intersection with the other two. It is sufficient to focus on the set corresponding to
  $o_1=0$. If $x$ belongs to this set, then it is of the form
  $x=(0,o_2,-o_2,p_1,p_2,p_2,\Phi_0,\Phi_1)$ with $o_2\neq 0$. This means that (\ref{eq:mfE def}) takes the form $\sigma_+^2+3\Phi_1^2/2=1$
  (note that $p_1$ and $p_2$ are parametrised by (\ref{eq:pi ito sp sm}) in which $\sigma_-$ has been set to zero). There is no restriction
  on $(o_2,\Phi_0)$ other than $o_2\neq 0$. Note that $p_i$ cannot equal $1$ for any $i$, since the definition of $\sfS$ would then require
  $o_j=o_k$, where $\{i,j,k\}=\{1,2,3\}$, which is incompatible with $\mfT=\mrVIz$. This means that the second last condition in the list of
  requirements for membership in $\sfS$ is void. The final condition takes the form $p_2>0$; i.e., $\sigma_+>-1$. In particular,
  $\sfS_{\mrVIz}^{\roper}$ is a smooth manifold and $\Gamma_{\mrVIz}^{\roper}$ acts freely and properly discontinuously on this manifold.

  Next, let $\mfs=\roLRS$. In this setting, each $\sfS_{\mfT}^{\roLRS}$ consists of three disjoint sets, corresponding to a choice of $i$
  such that if $\{i,j,k\}=\{1,2,3\}$ then the corresponding elements satisfy $p_j=p_k$ and $o_j=o_k$; note that if $p_j=p_k$ and $o_j=o_k$,
  we cannot have both $p_i=p_j$ and $o_i=o_j$ since the corresponding elements are invariant under $\Gamma^+$. In fact, the closure of each
  of these sets has empty intersection with the other two. It is thus sufficient to focus on the case that $i=1$. If $x$ belongs to the
  corresponding set, then $x$ is of the form $x=(o_1,o_2,o_2,p_1,p_2,p_2,\Phi_0,\Phi_1)$. Whether $o_1$ or $o_2$ are zero or non-zero, and
  whether $o_1$ has the same sign as $o_2$ or the opposite sign (in case both are non-zero) depends on the symmetry type. However, the
  corresponding conditions are immaterial to the question at hand: is $\sfS_{\mfT}^{\roLRS}$ a smooth manifold? The condition that $x$ be LRS
  but not isotropic corresponds to the condition that $(o_1-o_2,p_1-p_2)\neq (0,0)$, an open condition. Next, (\ref{eq:mfE def}) takes the
  form $\sigma_+^2+3\Phi_1^2/2=1$ and there is no restriction on $\Phi_0$. If there is a $p_i$ which equals $1$, then it has to be $p_1$
  (and this corresponds to $\sigma_+=-1$). In this case, we automatically have $o_2=o_3$. In other words, the second last condition in the list of
  requirements for membership in $\sfS$ is automatically satisfied. The form of the final condition depends on the Bianchi type, but it in practice
  corresponds to an open condition on $\sigma_+$. In particular, $\sfS_{\mrVIIz}^{\roLRS}$ is a smooth manifold and $\Gamma_{\mrVIIz}^{\roLRS}$
  acts freely and properly discontinuously on this manifold.

  Finally, assume that $\mfs=\rogen$. The signs and vaninshing/non-vanishing of the $o_i$ depend on the symmetry type and are immaterial to the question
  at hand, as in the LRS setting. Removing isotropic, LRS and permutation symmetric elements corresponds to removing closed subsets from $\sfS_\mfT$; the
  set of fixed points of a finite number of continuous maps is clearly closed. There is no restriction on $\Phi_0$. Clearly, (\ref{eq:mfE def}) defines
  a smooth manifold. Next, note that if $p_i=1$ and $\{i,j,k\}=\{1,2,3\}$, then $p_j=p_k=0$
  due to (\ref{eq:P Phi one cond on sing}) and $o_j=o_k$ due to the definition of $\sfS$. This means that the corresponding elements in $\sfS$ are LRS,
  something we exclude here. This means that the second last condition in the list of requirements for membership in $\sfS$ is void. The final condition in
  the definition of $\sfS$ corresponds to an open condition on $\sigma_\pm$ depending on the Bianchi type. To conclude, $\sfS_\mfT^{\rogen}$ is a smooth
  manifold and $\Gamma_\mfT^{\rogen}$ acts freely and properly discontinuously on this manifold.
\end{proof}

Finally, we are in a position to parametrise ${}^{\rosc}\mfS_\mfT^\mfs$ by smooth manifolds.
\begin{lemma}\label{lemma:sc mfB mfT mfs param sing}
  There is a bijection from ${}^{\rosc}\mfS_\mfT^\mfs$ to $\sfU_{\mfT}^{\mfs}$ for the following combinations of
  $\mfT$ and $\mfs$: if $\mfs=\iso$ and $\mfT\in\{\mrI,\mrIX\}$; if $\mfs=\roLRS$ and $\mfT\in \{\mrI,\mrII,\mrVIII,\mrIX\}$; if
  $(\mfT,\mfs)=(\mrVIz,\roper)$; and if $\mfs=\rogen$ and $\mfT$ is an arbitrary Bianchi class A type.
\end{lemma}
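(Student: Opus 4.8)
The plan is to mirror, almost verbatim, the proof of Lemma~\ref{lemma:sc mfB mfT mfs param}, replacing regular initial data by data on the singularity, the sets $\sfB_\mfT^\mfs$ by $\sfS_\mfT^\mfs$, and the symmetry-characterisation lemmas \ref{lemma:psi sigma plus isotropic}, \ref{lemma:sfB char per symm VIz}, \ref{lemma:sfB char per symm VIIz} by their ``sing'' counterparts \ref{lemma:psi sigma plus isotropic sing}, \ref{lemma:sfB char per symm VIz sing}, \ref{lemma:sfB char per symm VIIz sing}. Concretely: for one of the admissible pairs $(\mfT,\mfs)$ and $\mfI_\infty\in{}^{\rosc}\mS_\mfT^\mfs$, invoke Lemma~\ref{lemma:x arises from Iinf} to produce an $x\in\sfS_\mfT$ arising from $\mfI_\infty$; then, using the relevant sing-characterisation lemma, upgrade this to $x\in\sfS_\mfT^\mfs$ (for $\mfs=\iso$ use invariance under $\Gamma^+$; for $\mfs=\roper$ and $\mfs=\roLRS$ choose $x$ so the appropriate $\psi_\sigma^\pm$ fixes it; for $\mfs=\rogen$ rule out $\mfr\in\{\iso,\roper,\roLRS\}$ by contradiction via the same three lemmas). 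Passing to the class $[x]\in\sfU_\mfT^\mfs$ gives a candidate map ${}^{\rosc}\mS_\mfT^\mfs\to\sfU_\mfT^\mfs$.

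Well-definedness and injectivity then rest on the isometry-invariance of the $(o,P,\Phi_0,\Phi_1)$ data: two bases $\{e_i\}$ producing diagonal $(o,P)$ for the same $\mfI_\infty$ differ by an element of $\Gamma$ (this is the sing-analogue of Lemma~\ref{lemma:isometryinducesequalnuK}, which I would either cite in the form it appears later in the paper or reprove by the standard argument that simultaneous orthogonal diagonalisation of two commuting symmetric matrices is unique up to permutations and sign flips of the basis), so $[x]$ is independent of choices and descends to isometry classes of $\mfI_\infty$; conversely, if two simply connected data on the singularity map to the same class, the sing-analogue of Lemma~\ref{lemma:nuKsameisometric} gives an isometry between them. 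For surjectivity, given $[x]\in\sfU_\mfT^\mfs$ with $x\in\sfS_\mfT^\mfs$, apply Lemma~\ref{lemma:realcond sing} (the converse to Lemma~\ref{lemma:x arises from Iinf}, referenced in its Remark) to obtain simply connected $\mfI_\infty=(G,\msH,\msK,\Phi_0,\Phi_1)$ and an $\msH$-orthonormal basis realising $x$; one must then check $\mfI_\infty$ has symmetry type exactly $\mfs$ and is not of isotropic or LRS Bianchi type $\mrVIIz$. The symmetry-type check is the same dichotomy argument as in Lemma~\ref{lemma:sc mfB mfT mfs param}: $\mfI_\infty\in{}^{\rosc}\mS_\mfT^{\mfr}$ for some $\mfr$, a second basis realises some $x'\in\sfS_\mfT^{\mfr}$, and since $x,x'$ are $\Gamma$-related while the $\sfS_\mfT^{\mfr}$ are pairwise disjoint, $\mfr=\mfs$. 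The exclusion of isotropic/LRS Bianchi type $\mrVIIz$ uses Lemma~\ref{lemma:LRS VIIz is I sing}: such data have $\msH$ flat-spatial-curvature-type structure forcing $o$ of the shape $o_i=0$, $o_j=o_k$, and then a short computation with the commutators (as at the end of the proof of Lemma~\ref{lemma:sc mfB mfT mfs param}) shows $o=0$, i.e. Bianchi type I, a contradiction.

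The main obstacle is the surjectivity half, and within it the exclusion of the isotropic/LRS Bianchi type $\mrVIIz$ cases. Unlike the regular-data setting, where one argues with the Ricci tensor via (\ref{eq:Riccidiagitocm iso}), here one has no curvature tensor to work with directly on the singularity; the replacement is the subalgebra condition in item (3) of Definition~\ref{def:ndvacidonbbssh} together with Lemma~\ref{lemma:LRS VIIz is I sing}, and one must carefully track, using the Lie-bracket relation $[e_j,e_k]=\pm o_i e_i$ and the assumed family $\Psi_t$ of Lie algebra isomorphisms, that an LRS Bianchi type $\mrVIIz$ structure would force the remaining commutator component to vanish. I expect the writing of this to be essentially a transcription of the final paragraph of the proof of Lemma~\ref{lemma:sc mfB mfT mfs param}, with ``Ricci tensor vanishes'' replaced by the geometric content of Lemma~\ref{lemma:LRS VIIz is I sing} and ``$\mB[V]$'' replaced by ``$\mS$''. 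Everything else is bookkeeping: the finite groups $\Gamma_\mfT^\mfs$ act freely (by Lemma~\ref{lemma:sing mfT mfs}), so the quotients $\sfU_\mfT^\mfs$ are the manifolds we want, and the bijection transports their smooth structure back to ${}^{\rosc}\mfS_\mfT^\mfs$.

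Thus the proof reads, in outline: (i) construct the map via Lemma~\ref{lemma:x arises from Iinf} and the three sing-symmetry lemmas; (ii) well-definedness and injectivity from the sing-analogues of Lemmas~\ref{lemma:isometryinducesequalnuK} and \ref{lemma:nuKsameisometric}; (iii) surjectivity from Lemma~\ref{lemma:realcond sing}, the disjointness of the $\sfS_\mfT^{\mfr}$, and Lemma~\ref{lemma:LRS VIIz is I sing} to discard the excluded $\mrVIIz$ symmetry types. The details being routine given the earlier machinery, I would state them briefly and refer the reader to the proof of Lemma~\ref{lemma:sc mfB mfT mfs param} for the parts that are identical after the obvious dictionary $\bn\leftrightarrow o$, $K\leftrightarrow P$, $\mB\leftrightarrow\mS$.
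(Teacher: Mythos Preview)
Your proposal is correct and follows essentially the same approach as the paper's proof, which is indeed a near-verbatim transcription of the proof of Lemma~\ref{lemma:sc mfB mfT mfs param} using the sing-analogues (Lemmas~\ref{lemma:isometryinducesequal o P} and \ref{lemma:nuKsameisometric sing}) exactly as you anticipate. One small clarification: your concern that ``one has no curvature tensor to work with directly on the singularity'' is unfounded --- $\msH$ is an honest left-invariant Riemannian metric on $G$, so its Ricci tensor is available and (\ref{eq:Riccidiagitocm iso}) applies to the $o_i$ verbatim; the paper's exclusion of isotropic/LRS Bianchi type $\mrVIIz$ therefore uses the Ricci tensor of $\msH$ and the commutator argument exactly as in the regular case, without needing to invoke the subalgebra condition from Definition~\ref{def:ndvacidonbbssh}.
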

\begin{remark}\label{remark:sc mfB mfT mfs param sing}
  Combining the bijection obtained in the lemma with Lemma~\ref{lemma:sing mfT mfs} yields a smooth structure on ${}^{\rosc}\mfS_\mfT^\mfs$
  for the relevant combinations of $\mfT$ and $\mfs$. In what follows, we assume ${}^{\rosc}\mfS_\mfT^\mfs$ to have been endowed with this
  smooth structure. 
\end{remark}
\begin{proof}
  Let $(\mfT,\mfs)$ be one of the combinations mentioned in the statement of the lemma and $\mfI_\infty\in{}^{\rosc}\mS_\mfT^\mfs$. Then there is an
  $x\in\sfS_{\mfT}$ arising from $\mfI_\infty$; cf. Lemma~\ref{lemma:x arises from Iinf} and Remark~\ref{remark:x arising from mfIinf}. If
  $\mfs=\iso$, then $x$ is invariant under $\Gamma^+$ due to Lemma~\ref{lemma:psi sigma plus isotropic sing}. This means that $x\in\sfS_\mfT^{\iso}$; cf.
  Definition~\ref{def:STs GTs}. Next, if $\mfs=\roper$, then we can assume $x$ to belong to
  $\sfS_{\mrVIz}^{\roper}$ due to Lemma~\ref{lemma:sfB char per symm VIz sing} and Definition~\ref{def:STs GTs}. If $\mfs=\roLRS$, then we can
  assume $x$ to be such that it is not invariant under all the maps in $\Gamma^+$, but such that there is a transposition $\sigma$ such that
  $\psi_\sigma^+(x)=x$; see Lemma~\ref{lemma:sfB char per symm VIIz sing}. This means that $x\in\sfS_{\mfT}^{\roLRS}$; see Definition~\ref{def:STs GTs}.
  Finally, let $\mfs=\rogen$. We then wish to prove that $x\in \sfS_\mfT^\mfs$. Assume, to this end, that $x\in\sfS_\mfT^{\mfr}$ for some
  $\mfr\neq\rogen$. If $\mfr=\iso$, then $\mfs=\iso$ due to Lemma~\ref{lemma:psi sigma plus isotropic sing}, a contradiction. If $\mfr=\roper$, then
  $\mfs=\roper$ due to Lemma~\ref{lemma:sfB char per symm VIz sing}, a contradiction. If $\mfr=\roLRS$, then $\mfs=\roLRS$ due to
  Lemma~\ref{lemma:sfB char per symm VIIz sing}, a contradiction. To conclude: we can, with each $\mfI\in{}^{\rosc}\mS_\mfT^\mfs$ associate an
  $x\in \sfS_\mfT^\mfs$, which, in its turn, gives rise to an element, say $[x]$, of $\sfU_\mfT^\mfs$; see Lemma~\ref{lemma:sing mfT mfs}.
  If $x_i\in\sfS_\mfT^\mfs$, $i=1,2$, are both associated with $\mfI_\infty$ in this way, then $[x_1]=[x_2]$ due to
  Lemma~\ref{lemma:isometryinducesequal o P}. In other words, there is a well defined map from ${}^{\rosc}\mS_\mfT^\mfs$ to $\sfU_\mfT^\mfs$. Moreover,
  again due to Lemma~\ref{lemma:isometryinducesequal o P}, this map descends to a map from ${}^{\rosc}\mfS_\mfT^\mfs$ to $\sfU_\mfT^\mfs$. Next, if two
  initial data sets, say $\mfI_{\infty,0}$ and $\mfI_{\infty,1}$, are mapped to the same element, then $\mfI_{\infty,0}$ and $\mfI_{\infty,1}$ are isometric;
  see Lemma~\ref{lemma:nuKsameisometric sing}. Thus the map from ${}^{\rosc}\mfS_\mfT^\mfs$ to $\sfU_\mfT^\mfs$ is well defined and injective. What remains
  is to prove surjectivity.

  In order to prove surjectivity, let $[x]\in\sfU_\mfT^\mfs$ for some $x\in\sfS_\mfT^\mfs$. Due to Lemma~\ref{lemma:realcond sing}, there are
  simply connected initial data on the singularity $\mfI_\infty=(G,\msH,\msK,\Phi_0,\Phi_1)$ and a basis $\{e_i\}$ of $\mfg$, orthonormal with respect to
  $\msH$, such that if $o$ and $P$ are the commutator and Weingarten matrices associated with $\{e_i\}$ and $\mfI_\infty$, then $x=(o,P,\Phi_0,\Phi_1)$.
  We need to prove that $\mfI_\infty\in{}^{\rosc}\mS_\mfT^\mfs$. All that remains to be demonstrated is that $\mfI_\infty$ are of symmetry type $\mfs$ and
  that $\mfI_\infty$ are neither of isotropic nor of LRS Bianchi type $\mrVIIz$; see Definition~\ref{def:sets of id on singularity}. Assume, for the moment,
  that $\mfI_\infty$ are neither of isotropic nor of LRS Bianchi type $\mrVIIz$. Then we know that $\mfI_\infty\in{}^{\rosc}\mS_\mfT$. This means that
  $\mfI_\infty\in{}^{\rosc}\mS_\mfT^{\mfr}$ for some symmetry type $\mfr$. By the arguments presented above, this means that there is a basis $\{e_i'\}$ of
  $\mfg$, orthonormal with respect to $\msH$, such that if $o'$ and $P'$ are the commutator and Weingarten matrices associated with $\{e_i'\}$ and
  $\mfI_\infty$, then $x'=(o',P',\Phi_0,\Phi_1)\in\sfS_{\mfT}^{\mfr}$. Combining these observations with Lemma~\ref{lemma:isometryinducesequal o P} and the
  fact that $\sfS_{\mfT}^{\mfr}$ and $\sfS_{\mfT}^{\mfs}$ are disjoint if
  $\mfs\neq\mfr$, it follows that $\mfr=\mfs$. In other words, $\mfI_\infty\in{}^{\rosc}\mS_\mfT^\mfs$. This proves surjectivity, with one caveat: we need to
  exclude the possibility that $\mfI_\infty$ are isotropic or LRS Bianchi type $\mrVIIz$. Assume, to this end, that they are. Then the Ricci tensor of $\msH$
  vanishes identically; see Lemma~\ref{lemma:LRS VIIz is I sing}. This means that $o$ has to be such that $o_i=0$ and $o_j=o_l$ for some $i,j,l$ such
  that $\{i,j,l\}=\{1,2,3\}$; see (\ref{eq:Riccidiagitocm iso}). If $p_j=p_l$, then $x\in\sfS_{\mrVIIz}^{\mfs}$ for $\mfs\in\{\iso,\roLRS\}$, in
  contradiction with the assumptions. Thus $p_j\neq p_l$. In particular, the initial data are not isotropic. If $p_i\notin\{p_j,p_l\}$, the initial data
  cannot be LRS either. Assume therefore, without loss of generality, that $p_2=p_3\neq p_1$, that $o_2=0$ and that $o_3=o_1$. In order for $\mfI_\infty$
  to be LRS, there must be an orthonormal basis $\{e_m'\}$ of $\mfg$ with respect to $\msH$ and a family $\Psi_t$ of Lie algebra isomorphisms such that
  the conditions of Definition~\ref{def:isotropic} hold with $\{e_i\}$ replaced by $\{e_i'\}$. Let $P'$ and $o'$ be the Weingarten and commutator matrices
  associated with $\{e_i'\}$ and $\mfI_\infty$. Then, due to the the conditions of Definition~\ref{def:isotropic}, $P'$ and $o'$ are diagonal. Moreover, if
  $p_i'$ and $o_i'$ denote the diagonal components of $P'$ and $o'$ respectively, then $o_2'=o_3'$ and $p_2'=p_3'$. This means, in particular, that
  $p_A=p_A'$ for $A\in\{2,3\}$ and that $p_1=p_1'$. Thus $e_1'=\pm e_1$. Since the initial data on the singularity are of Bianchi type $\mrVIIz$ and
  $o_2'=o_3'$, we also have to have $o_2'\neq 0$ and $o_1'=0$. Since the plane spanned by $e_2'$ and $e_3'$ has to coincide with the plane spanned by
  $e_2$ and $e_3$ (they are both the orthogonal complement of $e_1'=\pm e_1$), there is also a $t_0$ such that $\Psi_{t_0}e_2'=e_2$ and
  $\Psi_{t_0}e_3'=\pm e_3$. This means that
  \[
  0=[e_2',\pm e_3']=[\Psi_{t_0}e_2',\pm\Psi_{t_0}e_3']=[e_2,e_3]=o_1 e_1.
  \]
  Thus $o_1=0$. Since $o_2=0$ and $o_3=o_1$, we conclude that the initial data are of Bianchi type I, in contradiction with the assumptions.
  To conclude, $\mfI_\infty$ are neither of isotropic nor of LRS Bianchi type $\mrVIIz$. This completes the proof of surjectivity. 
\end{proof}

\chapter{Constructing a development}\label{section:constructingadevelopment}

Our next goal is to construct developments as in Definition~\ref{def:BianchiAdevelopment} corresponding to regular initial data $\mfI\in\mB[V]$;
i.e., to prove Proposition~\ref{prop:unique max dev}. As a starting point, we derive the relevant equations. 

\section{Equations, Bianchi class A development}\label{ssection:EquationsBclassAdev}

The construction of a development, given initial data, is based on solving (\ref{seq:EFEwrtvar})--(\ref{eq:phiddot}) below. Next, we explain how these
equations arise by expressing the Einstein non-linear scalar field equations for a Bianchi class A development as in
Definition~\ref{def:BianchiAdevelopment}. 

\begin{lemma}\label{lemma:FW frame eqns}
  Let $G$ be a connected $3$-dimensional unimodular Lie group, $I$ be an open interval and $g$ be a smooth metric on $M:=G\times I$ of the form
  (\ref{eq:gSH}), where $\bge$ is a smooth family of left invariant Riemannian metrics on $G$. Let $t_0\in I$ and let $\{e_i'\}$
  be an orthonormal basis of $\mfg$ with respect to $\bge(t_0)$. Let $e_0:=\d_t$. Define $e_i$ by the requirement that $e_i|_{t_0}=e_i'$ and
  $\nabla_{e_0}e_i=0$, where $\nabla$ is the Levi-Civita connection of $g$. Then $\{e_\a\}$ is a smooth global orthonormal frame for $(M,g)$, $e_i\perp e_0$
  for all $i\in\{1,2,3\}$ and $e_{i}|_t\in\mfg$ for all $t\in I$. Let $\ldr{\cdot,\cdot}:=g$, $\theta_{ij}:=\ldr{\nabla_{e_i}e_0,e_j}$,
  $\theta:=\de^{ij}\theta_{ij}$ and $\sigma_{ij}:=\theta_{ij}-\theta\delta_{ij}/3$. Let $n^{ij}$ be the components of the commutator matrix associated with
  $\{e_i\}$. Then $\theta_{ij}$ and $n^{ij}$ are independent of the spatial coordinate.

  Assume that $V\in C^{\infty}(\ro)$ and that $\phi\in C^{\infty}(I)$. Then $(M,g,\phi)$ satisfies the Einstein equations if and only if the following
  equations are satisfied:
  \begin{subequations}\label{seq:EFEwrtvar}
    \begin{align}
      \e_{mjl}n^{li}\theta_{ij} = & 0,\label{eq:Rzm eq zero}\\
      \d_t\sigma_{lm} = & -\theta\sigma_{lm}-s_{lm},\label{eq:silmd}\\
      \theta_t = & -\tfrac{3}{2}\sigma_{ij}\sigma^{ij}-\tfrac{1}{2}n_{ij}n^{ij}
      +\tfrac{1}{4}(\mathrm{tr}n)^{2}-\tfrac{3}{2}\phi_{t}^{2},\label{eq:thetad}\\
      \theta^{2} = & \tfrac{3}{2}\sigma_{ij}\sigma^{ij}+\tfrac{3}{2}n_{ij}n^{ij}-\tfrac{3}{4}(\mathrm{tr}n)^{2}
      +\tfrac{3}{2}\phi_{t}^{2}+3V\circ \phi,\label{eq:hamconfin}
    \end{align}
  \end{subequations}
  where indices are raised and lowered with the Kronecker delta and
  \begin{subequations}\label{seq:slmblmdef}
    \begin{align}
      b_{lm}  = & 2n_{m}^{\phantom{m}i}n_{il}-(\mathrm{tr}n)n_{lm},\label{eq:blmde}\\
      s_{lm}  = & b_{lm}-\tfrac{1}{3}(\mathrm{tr}b)\de_{lm}.\label{eq:slmde}  
    \end{align}
  \end{subequations}
  Moreover, the Jacobi identities yield
  \begin{equation}\label{eq:nijd}
    \d_t n_{ij} =  2\sigma^{k}_{\phantom{k}(i}n_{j)k}-\tfrac{1}{3}\theta n_{ij},
  \end{equation}
  where the parenthesis denotes symmetrization. Finally, the evolution equation for the scalar field can be written
  \begin{equation}\label{eq:phiddot}
    \phi_{tt}  = -\theta\phi_{t}-V'\circ\phi.
  \end{equation}

  Assume (\ref{seq:EFEwrtvar})--(\ref{eq:phiddot}) to be satisfied. Due to (\ref{eq:Rzm eq zero}), it follows that the $e_i'$ can be chosen so that
  $n^{ij}(t_0)=0$ for $i\neq j$ and $\theta_{ij}(t_0)=0$ for $i\neq j$.
  Given that $\{e_i'\}$ has been chosen in this way, there are smooth functions $a_i:I\rightarrow (0,\infty)$ such that
  $e_i=a_i^{-1}e_i'$. Moreover, $n^{ij}(t)=0$ for $i\neq j$ and $t\in I$; and $\theta_{ij}(t)=0$ for $i\neq j$ and $t\in I$.
\end{lemma}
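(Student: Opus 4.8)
The plan is to set up the Fermi--Walker propagated frame $\{e_\a\}$ and derive the claimed structure equations, then harvest the algebraic consequence of \eqref{eq:Rzm eq zero} together with the evolution equations to show that off-diagonal vanishing of $n^{ij}$ and $\theta_{ij}$ propagates. First I would verify the basic properties of the frame: parallel transport along the integral curves of $e_0=\d_t$ preserves orthonormality and preserves orthogonality to $e_0$ (since $\nabla_{e_0}e_0=0$ as the $t$-curves are unit-speed geodesics for a metric of the form \eqref{eq:gSH}), so $\{e_\a\}$ is a global orthonormal frame with $e_i\perp e_0$; that $e_i|_t\in\mfg$ follows because the left-invariant vector fields span a distribution preserved by the flow, or more concretely because for a metric of the form \eqref{eq:gSH} the spatial frame evolving by $\nabla_{e_0}e_i=0$ stays within the left-invariant fields (this is where one uses that $\bge$ is a family of \emph{left invariant} metrics). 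Spatial independence of $\theta_{ij}$ and $n^{ij}$ then follows from left invariance, since these are built from the metric, the second fundamental form, and the Lie bracket, all of which are left invariant. Writing out the Einstein equations in this frame (Ricci in terms of $\theta_{ij}$, $n^{ij}$ via the Gauss--Codazzi relations and the formula for the Ricci tensor of a left-invariant metric on a unimodular Lie algebra) produces \eqref{seq:EFEwrtvar}, with \eqref{eq:Rzm eq zero} being the vanishing of the mixed $R_{0m}$ components, \eqref{eq:silmd} and \eqref{eq:thetad} the evolution of the shear and expansion, and \eqref{eq:hamconfin} the Hamiltonian constraint; \eqref{eq:nijd} is the time derivative of the commutator matrix obtained by differentiating $[e_i,e_j]=\g^k_{ij}e_k$ and using $\g^k_{ij}=\e_{ijl}n^{lk}$; and \eqref{eq:phiddot} is $\Box_g\phi=V'\circ\phi$ for a spatially homogeneous $\phi$.

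For the final statement, I would argue as follows. At $t_0$, both $\theta_{ij}$ and $n^{ij}$ are symmetric matrices (symmetry of $\theta_{ij}$ comes from \eqref{eq:Rzm eq zero} — or directly from the fact that it is a second fundamental form — and symmetry of $n^{ij}$ is built into the definition of the commutator matrix in the unimodular case). Equation \eqref{eq:Rzm eq zero}, $\e_{mjl}n^{li}\theta_{ij}=0$, is precisely the statement that $n$ and $\theta$ (as symmetric matrices) commute: $[n,\theta]=0$. Two commuting symmetric matrices can be simultaneously diagonalised by an orthogonal change of basis, so choosing $\{e_i'\}$ to be a common eigenbasis, we may assume $n^{ij}(t_0)=0$ and $\theta_{ij}(t_0)=0$ for $i\neq j$. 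Since the $e_i$ arise from the $e_i'$ by Fermi--Walker transport and the spatial metric is diagonal in this basis at $t_0$, the relation $e_i=a_i^{-1}e_i'$ for smooth positive $a_i$ holds on all of $I$ — here one uses that a left-invariant metric which is diagonal at one time with respect to a fixed left-invariant frame, evolving according to the geodesic/parallel-transport prescription, remains diagonal; equivalently, one checks that $g(e_i',e_j')$ with $e_i'$ the \emph{fixed} left-invariant frame satisfies an ODE system that keeps the off-diagonal entries zero once they start at zero, which is itself a consequence of \eqref{eq:silmd} and \eqref{eq:nijd}.

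The propagation of diagonality is the real content, and I expect it to be the main obstacle to write cleanly. The key point is that the off-diagonal components $\sigma_{ij}$ ($i\neq j$) and $n_{ij}$ ($i\neq j$) satisfy, by \eqref{eq:silmd}, \eqref{eq:nijd} and the definitions \eqref{seq:slmblmdef}, a \emph{closed, linear, homogeneous} system of ODEs once we use that \eqref{eq:Rzm eq zero} forces $n$ and $\theta$ to commute at every time (the constraint \eqref{eq:Rzm eq zero} is preserved by the evolution, which one checks by differentiating it and using the evolution equations — this is the standard constraint-propagation argument). More precisely: differentiate $[n,\theta]=0$ in $t$; using \eqref{eq:silmd} and \eqref{eq:nijd} the right-hand side is again a commutator of $n$ and $\theta$ (plus terms already known to vanish), so $[n,\theta]\equiv 0$ on $I$. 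Then $n$ and $\theta$ remain simultaneously diagonalisable; since their common eigenbasis varies continuously and coincides with $\{e_i'\}$ at $t_0$ where the eigenbasis is fixed, and since the evolution cannot create off-diagonal terms (the off-diagonal entries solve a linear homogeneous ODE with zero initial data), we conclude $n^{ij}(t)=\theta_{ij}(t)=0$ for $i\neq j$ and all $t\in I$. From $\theta_{ij}$ diagonal and $e_i\perp e_0$, the spatial metric $\bge(t)$ is diagonal with respect to $\{e_i'\}$, whence $e_i=a_i^{-1}e_i'$ with $a_i(t)=|e_i'|_{\bge(t)}$ smooth and positive. The subtlety to handle carefully is the case of degenerate eigenvalues (where ``simultaneously diagonalisable'' does not uniquely pin down the eigenbasis), which is exactly why the ODE formulation — rather than a bare linear-algebra statement — is needed: the linear homogeneous ODE for the off-diagonal entries gives uniqueness of the zero solution regardless of eigenvalue multiplicities.
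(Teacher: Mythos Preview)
Your proposal is essentially correct and lands on the same key mechanism as the paper, but you take an unnecessary detour. The paper's argument for the final paragraph is more direct: once $\{e_i'\}$ is chosen so that $n^{ij}(t_0)$ and $\theta_{ij}(t_0)$ are diagonal, one simply observes from \eqref{eq:silmd} and \eqref{eq:nijd} alone (together with \eqref{seq:slmblmdef}) that the vector $x$ of off-diagonal components of $\sigma$ and $n$ satisfies an equation of the form $x_t=Ax$; since $x(t_0)=0$, uniqueness gives $x\equiv 0$. No appeal to constraint propagation of $[n,\theta]=0$ is needed --- the closure of the off-diagonal system does \emph{not} require the momentum constraint, contrary to what your phrasing suggests. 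Once the off-diagonal $\theta_{ij}$ vanish, the paper reads off $e_i=a_i^{-1}e_i'$ directly from $0=\nabla_{e_0}e_i=[e_0,e_i]+\nabla_{e_i}e_0$, which in components becomes $e_0(e_i^k)=-\theta_i e_i^k$ (no sum) with $e_i^k(t_0)=\delta_i^k$.

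So your constraint-propagation step and the ``common eigenbasis varies continuously'' discussion are superfluous scaffolding; the linear homogeneous ODE observation you mention at the end is the whole argument, and it stands on its own. Your instinct about degenerate eigenvalues is right --- the ODE formulation sidesteps that issue entirely --- but you can invoke it immediately rather than after the detour.
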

\begin{remark}
  In what follows, we always take for granted that $e_i'$ has been chosen so that $n^{ij}(t_0)=0$ for $i\neq j$ and $\theta_{ij}(t_0)=0$ for $i\neq j$.
  Denoting the diagonal components of $n(t_0)$ by $\bn_i$, the matrix $n$ with components $n^{ij}$ satisfies $n=\mathrm{diag}(n_{1},n_{2},n_{3})$, where
  \begin{equation}\label{eq:chco}
    n_{1}=\frac{a_{1}}{a_{2}a_{3}}\bn_{1},\ \
    n_{2}=\frac{a_{2}}{a_{1}a_{3}}\bn_{2},\ \
    n_{3}=\frac{a_{3}}{a_{1}a_{2}}\bn_{3}.
  \end{equation}
\end{remark}
\begin{proof}
  Define $e_0':=\d_t$. Writing $e_i=e_i^\a e_\a'$, the equations for the $e_i$ can be written as system of linear ODE with smooth coefficients (depending
  only on time) for $e_i^\a$. This guarantees the existence of the smooth vector fields $e_i$. Due to the equation, the form of the metric, the fact that
  $e_0$ is a unit vector field and the fact that $e_i^0|_{t_0}=0$, it follows that $e_i^0=0$ for all $t$. Since $e_0\ldr{e_i,e_j}=0$, it is also clear that
  $\{e_i\}$ is an orthonormal frame for $\mfg$ with respect to $\bge(t)$. To conclude, $\{e_\a\}$ is a smooth global orthonormal frame for $(M,g)$ and
  $e_{i}|_t\in\mfg$ for all $t\in I$. Define $\Gamma^{\a}_{\b\de}$ by $\nabla_{e_\b}e_{\de}=\Gamma^{\a}_{\b\de}e_{\a}$. Due to the fact that $\nabla_{e_0}e_i=0$ it
  can then, using arguments similar to those presented in
  \cite[Subsection~B.2.1, p.~279]{RinCauchy}, be verified that $\Gamma^0_{00}=0$ and $\Gamma_{0i}^0=\Gamma_{i0}^0=\Gamma^{i}_{00}=0$. Moreover,
  $\Gamma^{i}_{0l}=0$, $\Gamma_{ij}^{0}=\theta_{ij}$ and $\Gamma^{i}_{j0}=\theta_{ij}$. Finally, $\Gamma^i_{jk}$ are the connection
  coefficients associated with the metric induced on the constant-$t$ hypersurfaces. Again, due to the equation for the $e_i$, the conclusions of
  \cite[Subsection~B.2.2, p.~280]{RinCauchy} and \cite[Subsection~B.2.3, pp.~280--281]{RinCauchy} hold. This means that
  \begin{subequations}
    \begin{align}
      \roRic(e_0,e_0) = & -\theta_t-\theta^{ij}\theta_{ij},\label{eq:riczz}\\
      \roRic(e_0,e_m) = & \e_{mjl}n^{li}\theta_{ij},\label{eq:riczm}\\
      \roRic(e_l,e_m) = & \d_t\theta_{lm}+\theta\theta_{lm}+\ovRic(e_l,e_m).\label{eq:riclm}
    \end{align}
  \end{subequations}
  On the other hand, due to \cite[Lemma~19.11, p.~209]{RinCauchy},
  \begin{equation}\label{eq:ovRic lm}
    \ovRic(e_l,e_m)=2n_{m}^{\phantom{m}i}n_{il}-n^{ij}n_{ij}\de_{lm}+\tfrac{1}{2}(\mathrm{tr}n)^{2}\de_{lm}-(\mathrm{tr}n)n_{lm}.
  \end{equation}  
  In the present setting, the Einstein equations can be written
  \begin{equation}\label{eq:eineqshba}
    R_{\a\b}-\nabla_{\a}\phi\nabla_{\b}\phi-V\circ\phi\cdot g_{\a\b}=0.
  \end{equation}
  Since $\phi$ is assumed to be independent of the spatial variables, it follows from (\ref{eq:riczm}) that the $0m$-components
  of (\ref{eq:eineqshba}) are equivalent to (\ref{eq:Rzm eq zero}). Due to (\ref{eq:riczz}), the $00$-component of (\ref{eq:eineqshba})
  is equivalent to
  \begin{equation}\label{eq:Riczzeq}
    \theta_t+\sigma_{ij}\sigma^{ij}+\tfrac{1}{3}\theta^{2}+\phi_{t}^{2}-V\circ\phi=0.
  \end{equation}
  Due to (\ref{eq:riclm}) and (\ref{eq:ovRic lm}), the $lm$ components of (\ref{eq:eineqshba}) are given by
  \begin{equation}\label{eq:Riclmeq}
    \d_t \theta_{lm}+\theta\theta_{lm}+2n_{m}^{\phantom{m}i}n_{il}
    -n^{ij}n_{ij}\de_{lm}+\tfrac{1}{2}(\mathrm{tr}n)^{2}\de_{lm}-(\mathrm{tr}n)n_{lm}-V\circ\phi\cdot\de_{lm}=0.
  \end{equation}
  It is convenient to divide the equation into its trace free parts and its trace. The trace free part is given by (\ref{eq:silmd}). 
  Taking the trace of (\ref{eq:Riclmeq}) yields
  \begin{equation}\label{eq:tracedottheij}
    \theta_t+\theta^{2}-n_{ij}n^{ij}+\tfrac{1}{2}(\mathrm{tr}n)^{2}-3V\circ\phi=0.
  \end{equation}
  Subtracting a third times (\ref{eq:tracedottheij}) from (\ref{eq:Riczzeq}) yields
  \[
  \tfrac{2}{3}\theta_t+\sigma_{ij}\sigma^{ij}+\tfrac{1}{3}n_{ij}n^{ij}
  -\tfrac{1}{6}(\mathrm{tr}n)^{2}+\phi_{t}^{2}=0,
  \]
  so that (\ref{eq:thetad}) holds. Finally, subtracting (\ref{eq:tracedottheij}) from (\ref{eq:Riczzeq}) yields the Hamiltonian constraint
  (\ref{eq:hamconfin}). It remains to be demonstrated is that if (\ref{seq:EFEwrtvar}) holds, then Einstein's equations are satisfied. We
  already know that the $0m$-components of (\ref{eq:eineqshba}) as well as the trace free part of the $lm$-equations are satisfied. That the
  $00$-equations and the trace of the $lm$-equations are satisfied then follows from (\ref{eq:thetad}) and (\ref{eq:hamconfin}).

  Next, due to the arguments presented in \cite[Subsection~B.3, pp.~281--282]{RinCauchy}, the Jacobi identities imply that 
  \begin{equation}\label{eq:jacobi}
    \d_t n^{ik}-\theta^{k}_{\phantom{k}l}n^{li}-\theta^{i}_{\phantom{i}j}n^{jk}+\theta n^{ki}=0,
  \end{equation}
  so that (\ref{eq:nijd}) holds. We leave the proof of (\ref{eq:phiddot}) to the reader.

  In order to prove the final statement of the lemma, assume that $\{e_i'\}$ has been chosen so that $n^{ij}(t_0)=0$ for $i\neq j$ and $\theta_{ij}(t_0)=0$
  for $i\neq j$. Then it follows from (\ref{eq:silmd}) and (\ref{eq:nijd}) that collecting the off-diagonal components of $\sigma_{ij}$ and the
  off-diagonal components of $n_{ij}$ into one vector, say $x$, then $x$ satisfies an equation of the form $x_t=Ax$. Since $x(t_0)=0$, it follows
  that $x(t)=0$ for all $t\in I$. Next, 
  \[
  0=\nabla_{e_0}e_i=[e_0,e_i]+\nabla_{e_i}e_0=e_0(e_i^j)e_j'+\textstyle{\sum}_j \theta_{ij}e_j^k e_k'.
  \]
  This means that
  \[
  e_0(e_i^k)=-\textstyle{\sum}_j \theta_{ij}e_j^k=-\theta_i e_i^k
  \]
  (no summation in the last expression), where $\theta_i:=\theta_{ii}$ (no summation). Since $e_i^k(t_0)=\de_i^k$, it follows that $e_i^k(t)=0$ for all
  $i\neq k$ and all $t\in I$. The final statement of the lemma follows. 
\end{proof}

\section{Existence of developments}\label{ssection:ex of dev step one}

Next, we prove Proposition~\ref{prop:unique max dev}. However, it is convenient to first make the following technical observation.
\begin{lemma}\label{lemma: bS low bd}
  Let $0\leq\a\in\ro$. Assume $0\leq n_i\in\ro$, $i=1,2,3$, to be such that $n_1n_2n_3\leq\a$. Then
  \[
    p:=n_1^2+n_2^2+n_3^2-2(n_1n_2+n_2n_3+n_3n_1)\geq -10\a^{2/3}. 
  \]  
\end{lemma}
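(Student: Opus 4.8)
The plan is to exploit the full permutation symmetry of the situation together with a convenient algebraic rewriting of $p$. Both $p$ and the hypothesis $n_1n_2n_3\leq\a$ are invariant under permutations of $(n_1,n_2,n_3)$, so I would first assume, without loss of generality, that $n_3=\max\{n_1,n_2,n_3\}$. The crucial observation is then the identity
\[
p=(n_3-n_1-n_2)^2-4n_1n_2,
\]
which follows by expanding the square. Since the squared term is non-negative, this yields $p\geq-4n_1n_2$ at once, and the problem is reduced to showing that the product of the two smaller entries satisfies $n_1n_2\leq\a^{2/3}$.

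For this last point I would argue as follows. If $n_3=0$ then all three $n_i$ vanish, $p=0$, and there is nothing to prove; so assume $n_3>0$. Since $n_1,n_2\leq n_3$ we have $n_1n_2/n_3^2\leq 1$, and hence
\[
n_1n_2=(n_1n_2n_3)^{2/3}\left(\frac{n_1n_2}{n_3^2}\right)^{1/3}\leq(n_1n_2n_3)^{2/3}\leq\a^{2/3},
\]
the first equality being a direct check of exponents. (Alternatively, one can split into the cases $n_3\leq\a^{1/3}$, where $n_1n_2\leq n_3^2\leq\a^{2/3}$, and $n_3\geq\a^{1/3}$, where $n_1n_2\leq\a/n_3\leq\a^{2/3}$.)

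Combining the two steps gives $p\geq-4n_1n_2\geq-4\a^{2/3}\geq-10\a^{2/3}$, which is in fact a stronger bound than stated; the slack in the constant presumably only matters for the way the estimate is invoked later (via the spatial scalar curvature $\bS$). I do not expect any genuine obstacle here: the single non-routine ingredient is spotting the rewriting $p=(n_3-n_1-n_2)^2-4n_1n_2$ under the normalisation that $n_3$ is the largest of the three, and everything after that is elementary.
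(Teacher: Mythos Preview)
Your proof is correct and in fact yields the sharper bound $p\geq-4\a^{2/3}$. The paper's argument is different: it orders $n_1\leq n_2\leq n_3$ and splits into three cases according to whether $n_3\leq\a^{1/3}$, whether $n_2\leq n_3/4$, or whether $n_2\geq n_3/4$ and $n_3\geq\a^{1/3}$, obtaining the bounds $-4\a^{2/3}$, $0$, and $-10\a^{2/3}$ respectively. Your single identity $p=(n_3-n_1-n_2)^2-4n_1n_2$ (with $n_3$ largest) replaces this case analysis entirely and immediately reduces the problem to bounding the product of the two smaller entries; this is both cleaner and yields a better constant. The paper's approach has the minor advantage of being purely by inspection, but your identity is the more illuminating route and loses nothing.
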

\begin{proof}
  We can, without loss of generality, assume that $n_1\leq n_2\leq n_3$. If $n_3\leq \a^{1/3}$, then
  \[
    p\geq -2n_1(n_2+n_3)\geq-4\a^{2/3}.
  \]
  If $n_2\leq n_3/4$, then
  \[
    p\geq n_3^2-2n_3(n_1+n_2)\geq n_3^2-n_3^2=0.
  \]
  Assume now that $n_2\geq n_3/4$ and that $n_3\geq \a^{1/3}$. Then
  \[
    p\geq -2n_1n_2-2n_1n_3\geq -2\a^{2/3}-8\a^{2/3}.
  \]
  The lemma follows. 
\end{proof}

\begin{proof}[Proof of Proposition~\ref{prop:unique max dev}]
  Let $(G,\bge,\bk,\bphi_0,\bphi_1)=\mfI$. Due to Lemma~\ref{lemma:x arising from mfI}, there is an orthonormal basis $\{e_i'\}$ of $\mfg$ with
  respect to $\bge$ such that if $\bn$ and $K$ are the commutator and Weingarten matrices associated with $\{e_i'\}$ and $\mfI$, then
  $\bn$ and $K$ are diagonal. Moreover, due to (\ref{eq:nuvhc}), (\ref{eq:hamconfin}) is satisfied initially.
  Next, let $t_0\in \ro$ and solve (\ref{eq:silmd}), (\ref{eq:thetad}), (\ref{eq:nijd}) and (\ref{eq:phiddot}) with the following initial data:
  $\sigma_{ij}(t_0)$ are the components of the trace free part of $K$; $\theta(t_0):=\tr K$; $n(t_0):=\bn$; $\phi(t_0):=\bphi_0$ and
  $\phi_t(t_0):=\bphi_1$. Let $I$ denote the maximal existence interval. Due to (\ref{eq:silmd}), (\ref{eq:nijd}) and the fact that the off-diagonal
  components of $\sigma_{ij}$ and $n_{ij}$ vanish initially, it follows that $n(t)$ and $\sigma(t)$ are diagonal for all $t$. At this stage, we can
  define $a_i$ by
  \begin{equation}\label{eq:for the ai}
    \d_t a_{i}=(\sigma_{ii}+\theta/3)a_{i},\ \ \
    a_{i}(t_{0})=1
  \end{equation}
  (no summation). Let $\phi$ be the solution obtained above (by solving (\ref{eq:phiddot})), $M:=G\times I$, $e_0:=\d_t$ and $e_i:=a_i^{-1}e_i'$
  (no summation). Define a Lorentz metric on $M$ by the condition that $\{e_\a\}$ is an orthonormal frame. The proof of the statement that
  $(M,g,\phi)$ is a globally hyperbolic development of $\mfI$ is very similar to the proof of \cite[Proposition~20.3, p.~215]{RinCauchy}; the
  reader is encouraged to generalise the corresponding argument. In order to prove the uniqueness statement, assume that there are two developments
  with corresponding existence intervals $I_i$, $i=1,2$. By time translation, we can assume that the initial data are specified at $t=t_0$ for
  both developments. We can choose $\{e_i'\}$ as the initial frame (since the solutions induce the same initial data at $t_0$) for both developments.
  For both developments, we can introduce a Fermi-Walker propagated frame as in the statement of Lemma~\ref{lemma:FW frame eqns}. The corresponding
  $\sigma_{ij}$'s etc. then satisfy the same equations with the same initial data. This means that the $\sigma_{ij}$, the $\theta$, the $n_{ij}$ and
  the $\phi$ of the two solutions coincide. Moreover, both frames take the form $e_i:=a_i^{-1}e_i'$ (no summation). Since the $a_i$ have to satisfy
  (\ref{eq:for the ai}), it follows that the frames are the same. The only thing that remains to be verified is that the existence intervals are
  the same. However, the criteria that either $t_\pm=\pm\infty$ or $|\theta|+|\phi|$ is unbounded as $t\rightarrow t_\pm$ guarantee that the
  existence intervals are maximal. On the other hand, we need to verify that maximal solutions to (\ref{seq:EFEwrtvar})--(\ref{eq:phiddot})
  satisfy these criteria. Assume, to this end, that $t_+<\infty$ and that $\phi$ and $\theta$ remain bounded to the future. Excluding Bianchi
  type IX for the moment, (\ref{eq:hamconfin}) then implies that $\phi_t$ and the $\sigma_{ij}$ remain bounded to the future. Combining this observation
  with (\ref{eq:nijd}) leads to the conclusion that all the variables are bounded to the future. This means that the solution can be extended beyond
  $t_+$, contradicting the assumption that the existence interval is maximal. In the case of Bianchi type IX, assume, without loss of generality, that
  the $n_i$ are strictly positive and consider $f:=n_1n_2n_3$; here $n_i:=n_{ii}$ (no summation). Due to (\ref{eq:nijd}), $f_t=-\theta f$. This means,
  in particular, that $f$ remains bounded to the future; say $f\leq C$. Due to Lemma~\ref{lemma: bS low bd}, this means that the sum of the second and
  third terms on the right hand side of (\ref{eq:hamconfin}) is bounded from  below to the future. Moreover, combining this lower bound with
  (\ref{eq:hamconfin}) and the boundedness of $\theta$ and $\phi$ implies that $\sigma$ and $\phi_t$ are bounded to the future. Combining this bound with
  (\ref{eq:nijd}) yields the conclusion that all the variables remain bounded to the future, contradicting $t_+<\infty$.   
  This contradiction finishes the argument in the Bianchi type IX setting. The argument in the opposite time direction is similar. This leads
  to the desired dichotomy: either $t_\pm=\pm\infty$ or $|\theta|+|\phi|$ is unbounded as $t\rightarrow t_\pm$. In case $V\geq 0$, the above arguments lead
  to the improved dichotomy that either $t_\pm=\pm\infty$ or $|\theta|$ is unbounded as $t\rightarrow t_\pm$, as stated in Remark~\ref{remark:improved dichotomy}.

  In order to prove the final statement, assume, to begin with, that $\mfI\in\mB_{\mfT}^{\iso}[V]$. Due to Lemma~\ref{lemma:psi sigma plus isotropic}, we can
  then assume that all the $\sigma_{ij}$ vanish initially and all the $n_i$ equal initially. Combining these observations with (\ref{seq:EFEwrtvar})--(\ref{eq:nijd})
  yields the conclusion that $\sigma_{ij}(t)=0$ for all $t$ and that $n_i(t)=n_j(t)$ for all $t$ and all $i,j$. Thus the solution remains isotropic.
  Note also that (\ref{eq:for the ai}) yields the conclusion that all the $a_i$ equal. This means that the spacetime metric takes the form
  (\ref{eq:gSH}) with $\bge=a^2(t)\bar{h}$, where $\bar{h}$ is the initial metric. 

  Assume $\mfI\in\mB_\mfT^{\roLRS}[V]$. Then there is an $x\in \sfB_\mfT^{\roLRS}$ arising from $\mfI$; see
  Remark~\ref{remark:x arising from mfI}, Lemma~\ref{lemma:sfB char per symm VIIz} and Definition~\ref{def:BTs GTs}. We can, without loss of generality, assume
  that $x=(\bn_1,\bn_2,\bn_3,k_1,k_2,k_3,\bphi_0,\bphi_1)$ is such that $\bn_2=\bn_3$, $k_2=k_3$ and $(\bn_1,k_1)\neq (\bn_2,k_2)$. Next, (\ref{eq:silmd}) and
  (\ref{eq:nijd}) can be used to derive a homogeneous first order equation for $\sigma_{22}-\sigma_{33}$ and $n_{22}-n_{33}$. Since these quantities vanish initially,
  we conclude that they always vanish. If, in addition, $\sigma_{11}-\sigma_{22}$ and $n_{11}-n_{22}$ vanish at some later point, we similarly, by uniqueness, conclude
  that they always vanish, contradicting the properties of the initial data. Thus we always have $(\sigma_{11},n_{11})\neq (\sigma_{22},n_{22})$ for all $t$. These
  observations lead to the conclusion that that the initial data induced by the development at the constant-$t$ hypersurfaces are LRS. Next, since
  $\sigma_{22}=\sigma_{33}$, (\ref{eq:for the ai}) implies that $a_2=a_3$. This means that the metric can be written
  \begin{equation}\label{eq:LRS spacetime metric}
    g=-dt\otimes dt+a_1^2(t)\xi^1\otimes\xi^1+a_2^2(t)(\xi^2\otimes\xi^2+\xi^3 \otimes\xi^3),
  \end{equation}
  where $\{\xi^i\}$ is the frame dual to the frame used for the initial data, say $\{e_i'\}$.
  
  Assume $\mfI\in\mB_\mfT^{\roper}[V]$. Then there is an $x\in \sfB_\mfT^{\roper}$ arising from $\mfI$; see
  Remark~\ref{remark:x arising from mfI}, Lemma~\ref{lemma:sfB char per symm VIz} and Definition~\ref{def:BTs GTs}. We can, without loss of generality, assume
  that $x=(\bn_1,\bn_2,\bn_3,k_1,k_2,k_3,\bphi_0,\bphi_1)$ is such that $\bn_1=0$, $\bn_3=-\bn_2$ and $k_2=k_3$; i.e. $\bn_{11}=0$, $\bn_{33}=-\bn_{22}$ and
  $\bsi_{22}=\bsi_{33}$. Since the equation for $n_{11}$
  is homogeneous, this means that $n_{11}=0$. Moreover, (\ref{eq:silmd}) and (\ref{eq:nijd}) can be used to derive a homogeneous first order equation
  for $\sigma_{22}-\sigma_{33}$ and $n_{22}+n_{33}$. Since these quantities vanish initially, we conclude that they always vanish. This means that the
  permutation symmetry carries over to the development. Moreover, since $\sigma_{22}=\sigma_{33}$, (\ref{eq:for the ai}) implies that $a_2=a_3$. This
  means that the metric can again be written as in (\ref{eq:LRS spacetime metric}).

  Assume $\mfI\in\mB_\mfT^{\rogen}[V]$. Then, by the previous arguments, it cannot induce initial data with a higher degree of symmetry at another time.
  The final statement of the proposition follows. 
\end{proof}
In what follows, it will be of interest to know if the mean curvature is integrable or not. 
\begin{lemma}\label{lemma:BianchiAdevelopment}
  Let $V\in C^{\infty}(\rn{})$ be non-negative, $\mfT$ be a Bianchi class A type and $\mfI\in \mB_{\mfT}[V]$. Let $J$ and $\theta$ be the existence interval
  and mean curvature, respectively, of the corresponding development, see Proposition~\ref{prop:unique max dev}. Assume that $\theta$ is initially non-negative.
  If $\mfT\neq\mrIX$, there are the following two possibilities:
  \begin{itemize}  
  \item $\mfI\in \mB_{\mrI}^{\iso}[V]$ and $\theta(t)>0$ for all $t\in J$. Moreover, either $J=\rn{}$ or $J=(0,\infty)$ (after a time translation, if
    necessary). If $J=\rn{}$, then $\theta\notin L^{1}(-\infty,0)$ and $\theta\notin L^1(0,\infty)$. If $J=(0,\infty)$, then
    $\theta\notin L^{1}(0,1)$; $\theta\notin L^1(1,\infty)$; and $\theta\rightarrow\infty$ as $t\downarrow 0$. 
  \item $\mfI$ is not isotropic; $\theta(t)>0$ for all $t\in J$; up to a translation in $t$, $J=(0,\infty)$; $\theta(t)\rightarrow\infty$
    as $t\downarrow 0$; $\theta_t(t)<0$ for all $t$; $\theta\notin L^{1}(0,1)$; and $\theta\notin L^{1}(1,\infty)$.
  \end{itemize}
  In addition, $\theta$ converges to a non-negative real number as $t\rightarrow\infty$.

  If $\mfT=\mrIX$, $X(t):=\mfX[\theta(t),\phi(t),\phi_t(t)]$, see (\ref{eq:mfX def}), and $\mfI\in \mB_{\mrIX,+}[V]$, then $J=(0,t_+)$ and there is a $t_0\in J$
  such that $\theta(t)\rightarrow\infty$ as $t\downarrow 0$; $\theta_t(t)<0$ for all $t\leq t_0$; $X(t)>0$ for all $t\leq t_0$; and $\theta\notin L^{1}(0,t_0)$.
\end{lemma}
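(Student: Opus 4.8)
\emph{Overview of the plan.} The plan is to start from a time $t_0\in J$ provided by the definition of $\mB_{\mrIX,+}[V]$, at which $X(t_0)>0$ and $\theta(t_0)>0$, and to prove, in this order: (a) that $X>0$, $\theta>0$ and $\theta_t<0$ throughout $J\cap(-\infty,t_0]$; (b) that $t_-:=\inf J$ is finite, so that $J=(0,t_+)$ after the time translation fixed in Proposition~\ref{prop:unique max dev}; (c) that $\theta\to\infty$ as $t\downarrow0$; and (d) that $\theta\notin L^1(0,t_0)$. For (a), the first thing I would do is record, for a Bianchi class A development, the two consequences of (\ref{eq:silmd})--(\ref{eq:phiddot})
\[
\theta_t=-|\sigma|^2-\tfrac32\phi_t^2-\tfrac13X,\qquad X_t=-\tfrac23\theta X-2\theta|\sigma|^2,
\]
where $|\sigma|^2:=\sigma_{ij}\sigma^{ij}$: the first follows by using the Hamiltonian constraint (\ref{eq:hamconfin}) to eliminate the spatial curvature terms from (\ref{eq:thetad}), and the second by differentiating $X=\theta^2-\tfrac32\phi_t^2-3V\circ\phi$ and inserting (\ref{eq:thetad}) and (\ref{eq:phiddot}), the two $V'\circ\phi$ contributions cancelling. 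In particular $X>0$ forces $\theta_t<0$, and $X,\theta>0$ forces $X_t<0$. A continuity/bootstrap argument — consider $\sup\{t\le t_0:\theta(t)\le0\text{ or }X(t)\le0\}$ and use that on the interval to its right both functions are strictly decreasing, hence stay above their positive values at $t_0$ — then yields $X>0$, $\theta>0$, $\theta_t<0$ on all of $J\cap(-\infty,t_0]$.

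\emph{Step (b): finiteness of the past endpoint.} This is the step I expect to be the main obstacle, and it is where the structure of Bianchi type IX really enters. On $J\cap(-\infty,t_0]$ the second identity gives $X_t\le-\tfrac23\theta X$, hence $X(t)\ge X(t_0)\exp(\tfrac23\Theta(t))$ with $\Theta(t):=\int_t^{t_0}\theta\,ds\ge0$; feeding this into the first identity gives $\theta_t\le-\tfrac13X\le-c\,\exp(\tfrac23\Theta(t))$ with $c:=\tfrac13X(t_0)>0$. Since $\Theta_t=-\theta$ and $\Theta_{tt}=-\theta_t$, this is the Liouville-type differential inequality $\Theta_{tt}\ge c\,e^{2\Theta/3}$ with $\Theta(t_0)=0$, $\Theta_t(t_0)=-\theta(t_0)<0$. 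I would multiply by $-\Theta_t=\theta>0$ and integrate: the first integration gives $\theta(t)^2\ge\theta(t_0)^2+3c\,(e^{2\Theta(t)/3}-1)$, and dividing by $\sqrt{e^{2\Theta/3}-1}$ and integrating again with the substitution $u=\Theta(s)$ gives, for any $t<t_0'<t_0$, the bound $\sqrt{3c}\,(t_0'-t)\le\int_0^\infty(e^{2u/3}-1)^{-1/2}\,du<\infty$ (the integral converges, behaving like $u^{-1/2}$ near $0$ and like $e^{-u/3}$ near $\infty$). This bounds $t$ from below uniformly over $(t_-,t_0')$, so $t_-=\inf J>-\infty$; renormalising the time translation, $J=(0,t_+)$.

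\emph{Steps (c) and (d).} Since $\theta>0$ is monotonically decreasing on $(0,t_0]$ it has a limit in $(0,\infty]$ as $t\downarrow0$, and now that $t_-$ is known to be finite, Remark~\ref{remark:improved dichotomy} forces this limit to be $+\infty$. For (d) I would argue by contradiction: suppose $\theta\in L^1(0,t_0)$. Using $\d_t\ln(a_1a_2a_3)=\theta$, the spatial volume $v=a_1a_2a_3$ satisfies $v(t)=v(t_0)e^{-\int_t^{t_0}\theta}$, hence stays bounded below near $0$; by (\ref{eq:chco}) (for Bianchi IX we may take all $\bn_i$, hence all $n_i(t)$, of one sign) $n_1n_2n_3=\bn_1\bn_2\bn_3/v$ is then bounded, and Lemma~\ref{lemma: bS low bd} bounds $\bS$ from above by a constant $C$. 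Moreover $X>0$ gives $|\phi_t|<\sqrt{2/3}\,\theta\in L^1(0,t_0)$, so $\phi$ is bounded near $0$ and $V\circ\phi$ is bounded. Since $\theta_t=-\theta^2-\bS+3V\circ\phi$ (add (\ref{eq:thetad}) and (\ref{eq:hamconfin})) and $V\ge0$, we get $\theta_t\ge-\theta^2-C$ near $0$, i.e. $(1/\theta)_t=-\theta_t/\theta^2\le1+C/\theta^2\le2$ for $t$ small (using $\theta\to\infty$). Integrating and letting the lower endpoint tend to $0$, where $1/\theta\to0$, yields $1/\theta(t)\le2t$, i.e. $\theta(t)\ge 1/(2t)$ near $0$, whence $\int_0^{t_0}\theta=\infty$ — a contradiction. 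Thus $\theta\notin L^1(0,t_0)$, and this $t_0$ witnesses all four assertions.
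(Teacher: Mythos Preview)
Your proof of the Bianchi~IX part is correct and shares the paper's essential structure: the same pair of identities $\theta_t=-|\sigma|^2-\tfrac32\phi_t^2-\tfrac13X$ and $X_t=-\tfrac23\theta X-2\theta|\sigma|^2$, the same bootstrap for step~(a), and the same use of Lemma~\ref{lemma: bS low bd} (via the bound on $n_1n_2n_3$) to control $\bS$ in step~(d). The one place where you take a genuinely different route is step~(b). The paper avoids the Liouville-type inequality entirely: from $X(t)\ge X(t_0)e^{2\Theta/3}$ (your $\Theta$) and the cruder bound $\theta^2\ge X$ (immediate from $V\ge0$) it gets $e^{-\Theta/3}\theta\ge X(t_0)^{1/2}$, i.e.\ $\tfrac{d}{dt}\bigl(-3e^{-\Theta/3}\bigr)\ge X(t_0)^{1/2}$, and a single integration from $t$ to $t_0$ bounds $t_0-t$ by $3X(t_0)^{-1/2}$. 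Your energy-method integration reaches the same conclusion but through an extra layer (squaring, then estimating a convergent improper integral). For step~(d) there is a smaller variation: instead of bounding $(1/\theta)_t$ directly, the paper shows $\tfrac{d}{dt}(e^{2\Theta_{\mathrm{paper}}}\theta)\ge0$ near $0$ (from $\theta_t=-\theta^2-\bS+3V\circ\phi$ and $\theta\to\infty$), so $e^{2\Theta_{\mathrm{paper}}}\theta$ converges, which together with $\theta\to\infty$ forces $\Theta_{\mathrm{paper}}\to-\infty$. Both arguments are short; yours is perhaps more transparent, and the boundedness of $V\circ\phi$ that you derive is not actually needed (you only use $V\ge0$ and $\bS\le C$).
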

\begin{remark}\label{remark:Exceptional Bianchi type I}
  In order to illustrate that $J=\ro$ can occur, assume that there is an $s_0\in\ro$ such that $V'(s_0)=0$, $V(s_0)>0$ and $V''(s_0)<0$. There is then an
  isotropic Bianchi type I solution with $\theta(t)=[3V(s_{0})]^{1/2}$ and $\phi(t)=s_{0}$. The corresponding spacetime is a vacuum solution with a
  positive cosmological constant. Since it arises from trivial initial data, it is, however, excluded in the present setting. On the other hand, there are also
  non-vacuum solutions converging to the corresponding equilibrium; see the proof of Proposition~\ref{prop:futureasstepone} below. Note, in particular, that
  these solutions do not have crushing singularities, the setting we are interested in here. More generally,
  the isotropic Bianchi type I solutions  are non-generic; the complement of the corresponding set of initial data in Bianchi type I is open and dense
  (as well as of full measure). 
\end{remark}
\begin{proof}
  Let us first consider the case that $\mfT\neq\mrIX$ and that there is a $t_{1}$ in the existence interval such that $\theta(t_{1})=0$. Then, due to the
  discussion prior to Definition~\ref{def:degenerate version of id}, the initial data are trivial, a conclusion that contradicts $\mfI\in \mB[V]$. Since
  $\theta\geq 0$ initially, we conclude that $\theta(t)>0$ for all $t\in J$ if $\mfT\neq \mrIX$. Next, recall that the scalar curvatures of the constant-$t$
  hypersurfaces are given by
  \begin{equation}\label{eq:bSitonij}
    \bS:=-n_{ij}n^{ij}+\tfrac{1}{2}\tr(n)^{2};
  \end{equation}
  see, e.g., \cite[(19.6), p.~209]{RinCauchy}. Consider, moreover, the function
  \begin{equation}\label{eq:Xdef}
    X:=\theta^{2}-\tfrac{3}{2}\phi_{t}^{2}-3V\circ\phi,
  \end{equation}
  used in \cite{mas,reninflation}. Note that (\ref{eq:hamconfin}) and (\ref{eq:bSitonij}) imply that
  \begin{equation}\label{eq:Xaltversion}
    X=-\tfrac{3}{2}\bS+\tfrac{3}{2}\sigma_{ij}\sigma^{ij}.
  \end{equation}
  Since $\bS\leq 0$ for $\mfT\neq\mrIX$, $X$ is non-negative.

  Assume now that there is a $t_{1}$ such that $X(t_{1})=0$. Then $\sigma_{ij}(t_{1})=0$ and $\bS(t_{1})=0$. By arguments similar to those given prior to
  Definition~\ref{def:degenerate version of id}, this means that the solution is isotropic and of Bianchi type I. If $X$ ever vanishes, $X$ thus vanishes
  identically.  Due to (\ref{eq:Xdef}), this means that as long as $\theta$ is bounded, $\phi_{t}$ is bounded (so that $\phi$ cannot blow up in finite time).
  On the other hand, $\theta(t)>0$ for all $t$, so that (\ref{eq:thetad}) implies that $\theta$ decays to a non-negative value to
  the future. In particular, the solution cannot blow up in finite time to the future, so that $[t_{0},\infty)\subset J$. There are two possibilies
  concerning the existence interval. Either the existence interval is $\rn{}$, or it is $(0,\infty)$ after a time translation, if necessary. If $\theta$
  converges to a strictly positive number to the future, then it is clear that $\theta\notin L^1(1,\infty)$. If $\theta$ converges to zero, note that
  (\ref{eq:thetad}) and (\ref{eq:hamconfin}) imply that $\theta_t\geq -\theta^2$, so that
  \[
    \theta(t)\geq\theta(1)\exp\big(-\textstyle{\int}_1^t\theta(s)ds\big).
  \]
  For this to be consistent with $\theta$ converging to zero, we have to have $\theta \notin L^1(1,\infty)$. If the existence interval is $\rn{}$, then,
  since $\theta(t)\geq\theta(0)>0$ for all $t\leq 0$, it is clear that $\theta\notin L^1(-\infty,0)$. Finally, consider the case that
  the existence interval equals $(0,\infty)$. Then $\theta(t)\rightarrow\infty$ as $t\downarrow 0$, since $\theta$ increases to the past, and since the
  existence interval would be unbounded to the past in case $\theta$ were bounded to the past; see Remark~\ref{remark:improved dichotomy}. For a proof that
  $\theta\notin L^1(0,1)$ in this case, see below. If $\mfT\neq\mrIX$ we, from now on, therefore restrict our attention to anisotropic solutions such that
  $X(t)>0$ and $\theta(t)>0$ for all $t$. 
  
  If $\mfT=\mrIX$, we assume that $\mfI\in \mB_{\mrIX,+}[V]$. By Definition~\ref{definition:Bap Bp Bpp}, there is thus a $t_0\in J$ such that $\theta(t_0)>0$
  and $X(t_0)>0$. If $\mfT\neq\mrIX$, fix an arbitrary $t_{0}\in J$. Then $X(t_{0})>0$ and it can be computed that 
  \[
  X_t=\theta (\bS-3\sigma_{ij}\sigma^{ij})=-\tfrac{2}{3}\theta X-2\theta\sigma_{ij}\sigma^{ij};
  \]
  see \cite[(26.12), p.~452]{stab}, where we have set the terms associated with the Vlasov matter to zero. In case $\mfT\neq\mrIX$, we know that $\theta>0$
  and $X>0$. In case $\mfT=\mrIX$, the fact that $\theta(t_0)>0$ and $X(t_0)>0$ implies that $X_t(t_0)<0$, so that $X$ increases to the past. Combining this
  observation with (\ref{eq:Xdef}) implies that $\theta$ remains positive to the past. To summarise, if $\mfT=\mrIX$, then $X(t)>0$ and $\theta(t)>0$ for
  $t\leq t_0$. This means that for all Bianchi types, 
  \begin{equation}\label{eq:dotXest}
    X_t\leq-\tfrac{2}{3}\theta X
  \end{equation}
  for $t\leq t_0$. Letting
  \[
  \Theta(t)=\textstyle{\int}_{t_{0}}^{t}\theta(s)ds,
  \]
  (\ref{eq:dotXest}) implies that for $t\leq t_0$
  \[
  \tfrac{d}{dt}(e^{2\Theta/3}X)=e^{2\Theta/3}\left(\tfrac{2}{3}\theta X+X_t\right)\leq 0.
  \]
  In particular, due to (\ref{eq:Xdef}) and the fact that $V$ is non-negative, 
  \[
  e^{2\Theta(t)/3}\theta^{2}(t)\geq e^{2\Theta(t)/3}X(t)\geq X(t_{0}),
  \]
  for $t\leq t_{0}$, so that 
  \[
  \tfrac{1}{3}e^{\Theta(t)/3}\theta(t)\geq\tfrac{1}{3}X^{1/2}(t_{0}).
  \]
  Integrating this inequality from $t_{a}$ to $t_{b}$, where $t_{a}\leq t_{b}\leq t_{0}$, yields
  \begin{equation}\label{eq:qtbta}
    e^{\Theta(t_{b})/3}-e^{\Theta(t_{a})/3}\geq\tfrac{1}{3}X^{1/2}(t_{0})(t_{b}-t_{a}),
  \end{equation}
  so that 
  \[
  e^{\Theta(t_{b})/3}\geq\tfrac{1}{3}X^{1/2}(t_{0})(t_{b}-t_{a})+e^{\Theta(t_{a})/3}.
  \]
  If the solution existed globally to the past, we could let $t_{a}\rightarrow-\infty$ in this inequality, something which is clearly impossible.
  By a time translation, if necessary, we can thus assume that $J=(0,t_+)$. Combining this observation with Remark~\ref{remark:improved dichotomy}
  yields the conclusion that $\theta(t)$ is unbounded to the past. Next, note that (\ref{eq:thetad}) and (\ref{eq:Xaltversion}) imply that
  \begin{equation}\label{eq:theta t ito X}
    \theta_t=-\sigma_{ij}\sigma^{ij}-\tfrac{1}{3}X-\tfrac{3}{2}\phi_t^2.
  \end{equation}
  In other words, $\theta$ increases monotonically to the past for $t\leq t_0$, since $X(t)>0$ for $t\leq t_0$. Since $\theta$ is unbounded to the past,
  this means that $\theta(t)\rightarrow\infty$ as $t\downarrow 0$.
  
  For the remainder of the analysis, we consider the cases $\mfT=\mrIX$ and $\mfT\neq\mrIX$ separately. Assume $\mfT\neq\mrIX$. Since $\theta$ is decaying
  to the future and is bounded from below by $0$, we obtain global existence to the future by Remark~\ref{remark:improved dichotomy}. Thus the existence
  interval is $(0,\infty)$ and $\theta(t)\rightarrow\infty$ as $t\downarrow 0$. Combining (\ref{eq:thetad}) and (\ref{eq:hamconfin}) yields
  \begin{equation}\label{eq:theta t blow up}
    \theta_t=-\theta^{2}-\bS+3V\circ\phi.
  \end{equation}
  In particular, 
  \[
  \tfrac{d}{dt}(e^{\Theta}\theta)=e^{\Theta}(\theta^{2}+\theta_t)=e^{\Theta}[-\bS+3V\circ\phi]\geq 0,
  \]
  since $\bS\leq 0$ when $\mfT\neq\mrIX$. Thus $e^{\Theta}\theta$ decreases to the past, so that it converges to a constant $\a_{0}\geq 0$. We must thus have
  \[
  \lim_{t\downarrow 0}\Theta(t)=-\infty,
  \]
  so that $\theta\notin L^{1}(0,1)$ (this argument also applies if $(\mfT,\mfs)=(\mrI,\iso)$ and $J=(0,\infty)$). In order to prove that
  $\theta\notin L^{1}(1,\infty)$, note that (\ref{eq:Riczzeq}) implies that
  \begin{equation}\label{eq:dotthetaprelver}
    \theta_t=-\tfrac{1}{3}(1+q)\theta^{2},
  \end{equation}
  where
  \begin{equation}\label{eq:q def mid of pf}
    q:=3\tfrac{\sigma_{ij}\sigma^{ij}}{\theta^{2}}+3\tfrac{\phi_{t}^{2}}{\theta^{2}}-3\tfrac{V\circ\phi}{\theta^{2}}.
  \end{equation}
  Note that, due to (\ref{eq:hamconfin}), all the terms on the right hand side of (\ref{eq:q def mid of pf}) are bounded. In fact, $q\in [-1,2]$,
  so that $|q|\leq 2$. Combining this observation with (\ref{eq:dotthetaprelver}) yields, for $t\geq t_{0}$,
  \begin{equation}\label{eq:thetafutasnonint}
    \theta(t)=\theta(t_{0})\exp\big(-\tfrac{1}{3}\textstyle{\int}_{t_{0}}^{t}[1+q(s)]\theta(s)ds\big)
    \geq \theta(t_{0})\exp\big(-\textstyle{\int}_{t_{0}}^{t}\theta(s)ds\big).
  \end{equation}
  We know $\theta$ to be monotonically decaying. If it decays to a strictly positive number, then it is obvious that $\theta\notin L^{1}(1,\infty)$.
  If $\theta(t)\rightarrow 0$ as $t\downarrow 0$, then (\ref{eq:thetafutasnonint}) implies that $\theta\notin L^{1}(1,\infty)$.

  Next, consider the case that $\mfT=\mrIX$. All that remains to be demonstrated is that $\theta\notin L^{1}(0,t_0)$. Assuming the opposite, it follows
  that $f:=n_1n_2n_3$ remains bounded to the past; $f_t=-\theta f$ due to (\ref{eq:nijd}). Let $\a\in\ro$ be such that $(n_1n_2n_3)(t)\leq\a$ for all
  $t\leq t_0$. Due to Lemma~\ref{lemma: bS low bd}, 
  \[n_1^2+n_2^2+n_3^2-2(n_1 n_2+n_2 n_3+n_3 n_1) \geq  -10\a^{2/3}
  \]  
  for all $t\leq t_0$. In other words, $\bS$ is bounded from above to the past. On the other hand, 
  \begin{equation}\label{eq:exp two Theta theta}
    \tfrac{d}{dt}(e^{2\Theta}\theta)=e^{2\Theta}(\theta^2-\bS+3V\circ\phi),
  \end{equation}
  where we appealed to (\ref{eq:theta t blow up}). Since $\theta(t)\rightarrow\infty$ as $t\downarrow 0$ and $\bS$ is bounded from above to the past,
  there is a $t_1\in (0,t_0)$ such that the right hand side of (\ref{eq:exp two Theta theta}) is $\geq 0$ for $t\leq t_1$. This means that
  $e^{2\Theta}\theta$ converges to a non-negative real number to the past. Since $\theta(t)\rightarrow\infty$, this means that
  $\Theta(t)\rightarrow -\infty$; i.e., $\theta$ is not integrable to the past. 
\end{proof}
Beyond the case that both $X$ and $\theta$ are strictly positive, it is of interest to consider the following possibility. 
\begin{lemma}\label{lemma:Bianchi IX remainder}
  Let $0\leq V\in C^{\infty}(\rn{})$ and $\mfI\in \mB_{\mrIX,\roap}[V]$; see Definition~\ref{definition:Bap Bp Bpp}. Let $J$ and $\theta$ be the existence
  interval and mean curvature, respectively, of the corresponding development, see Proposition~\ref{prop:unique max dev}. Then, up to a time translation,
  there are $t_1\in J$ and $t_+>0$ such that $\theta(t)>0$ for $t\leq t_1$; $J=(0,t_+)$; $\theta(t)\rightarrow\infty$ as $t\downarrow 0$; $\theta_t(t)<0$
  for all $t\leq t_1$; and $\theta\notin L^{1}(0,t_1)$.
\end{lemma}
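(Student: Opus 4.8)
The plan is to transcribe, with the positivity of $X$ replaced by the quantitative asymptotic control built into Definition~\ref{definition:Bap Bp Bpp}, the Bianchi type IX part of the proof of Lemma~\ref{lemma:BianchiAdevelopment}. As in that proof I would work with the variables of Lemma~\ref{lemma:FW frame eqns} (after the frame choice that diagonalises $n$, so that the diagonal entries $n_i$ of $n$ are nonzero and of a common sign), and use the identities $\bS=-n_{ij}n^{ij}+\tfrac12(\tr n)^2$, $X=\theta^2-\tfrac32\phi_t^2-3V\circ\phi=-\tfrac32\bS+\tfrac32\sigma_{ij}\sigma^{ij}$, see (\ref{eq:bSitonij}), (\ref{eq:Xdef}) and (\ref{eq:Xaltversion}), together with $\theta_t=-\theta^2-\bS+3V\circ\phi$, see (\ref{eq:theta t blow up}) (these last two identities hold for every Bianchi class A type).

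First I would fix $\e_0:=1/10$ and invoke Definition~\ref{definition:Bap Bp Bpp} to choose $t_1\in J$ with $\theta(t)>0$, $X(t)/\theta^2(t)>-\e_0$ and $V\circ\phi(t)/\theta^2(t)<\e_0$ for all $t\le t_1$. Since $V\ge 0$ forces $X\le\theta^2$, and $\bS=\sigma_{ij}\sigma^{ij}-\tfrac23 X\ge-\tfrac23 X\ge-\tfrac23\theta^2$, the formula $\theta_t=-\theta^2-\bS+3V\circ\phi$ gives, on $(t_-,t_1]$, the bound $\theta_t<-\theta^2+\tfrac23\theta^2+3\e_0\theta^2=-\tfrac1{30}\theta^2<0$. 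Hence $\theta$ is positive and strictly decreasing on $(t_-,t_1]$; moreover $\partial_t(\theta^{-1})=-\theta^{-2}\theta_t>\tfrac1{30}$ there, so integrating from $t$ to $t_1$ and using $\theta^{-1}(t)>0$ forces $t_1-t<30\,\theta^{-1}(t_1)$ for every $t\in(t_-,t_1)$, whence $t_->-\infty$. Because $V\ge 0$, Remark~\ref{remark:improved dichotomy} then shows $|\theta|$ is unbounded as $t\downarrow t_-$, and monotonicity promotes this to $\theta(t)\to\infty$ as $t\downarrow t_-$. A time translation lets us take $t_-=0$, so $J=(0,t_+)$, $\theta(t)\to\infty$ as $t\downarrow 0$, and $\theta_t<0$ on $(0,t_1]$.

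It remains to show $\theta\notin L^1(0,t_1)$, which I would prove by contradiction, exactly as in the Bianchi type IX part of Lemma~\ref{lemma:BianchiAdevelopment}. Suppose $\theta\in L^1(0,t_1)$. Then $f:=n_1n_2n_3$ solves $f_t=-\theta f$ by (\ref{eq:nijd}), so $|f|$ stays bounded, say $|f|\le\a$, as $t\downarrow 0$, and Lemma~\ref{lemma: bS low bd} applied to $|n_1|,|n_2|,|n_3|$ bounds $\bS$ from above on $(0,t_1]$. Writing $\Theta(t):=\int_{t_1}^t\theta(s)\,ds$, one computes $\tfrac{d}{dt}(e^{2\Theta}\theta)=e^{2\Theta}(\theta^2-\bS+3V\circ\phi)$ from (\ref{eq:theta t blow up}); since $\theta(t)\to\infty$, $\bS$ is bounded above and $V\ge 0$, the bracket is nonnegative for $t$ near $0$, so $e^{2\Theta}\theta$ is nondecreasing there and hence has a finite limit as $t\downarrow 0$. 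But $\theta\in L^1(0,t_1)$ makes $\Theta(t)\to-\int_0^{t_1}\theta(s)\,ds$, a finite number, so $e^{2\Theta(t)}\theta(t)\to\infty$, a contradiction. Thus $\theta\notin L^1(0,t_1)$.

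The only step I expect to require genuine care is the first one: extracting a single $t_1$ on whose past all three pointwise estimates of Definition~\ref{definition:Bap Bp Bpp} hold simultaneously, and checking that $X\le\theta^2$ together with the bound on $V\circ\phi/\theta^2$ really does dominate the $-\theta^2$ term in $\theta_t$ with a definite margin. Once $\theta$ is known to be monotone and to blow up at a finite $t_-$, the non-integrability argument is a direct copy of the one already carried out for Bianchi type IX in the proof of Lemma~\ref{lemma:BianchiAdevelopment}, and should present no new difficulty.
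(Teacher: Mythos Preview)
Your argument is correct and follows the same strategy as the paper: derive $\theta_t\le -c\,\theta^2$ on a past half-interval to force finite-time blow-up, then reuse the $e^{2\Theta}\theta$ non-integrability argument from Lemma~\ref{lemma:BianchiAdevelopment}. The paper's version first splits off the case where $X(t)>0$ for some $t\le t_0$ (handled directly by Lemma~\ref{lemma:BianchiAdevelopment}) and, assuming $X\le 0$ throughout, uses the constraint (\ref{eq:X ver HC}) to get $\tfrac{3}{2}\phi_t^2/\theta^2\to 1$ and hence $\theta_t\le -\tfrac12\theta^2$ via (\ref{eq:theta t ito X}); your direct bound $-\bS\le\tfrac23 X\le\tfrac23\theta^2$ in (\ref{eq:theta t blow up}) sidesteps this case split at the cost of a harmlessly weaker constant.
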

\begin{proof}
  By assumption, see Definition~\ref{definition:Bap Bp Bpp}, there is a $t_0\in J=(t_-,t_+)$ such that $\theta(t)>0$ for all $t\leq t_0$. If there is a $t\leq t_0$
  such that $X(t)>0$, then the desired conclusions follow from Lemma~\ref{lemma:BianchiAdevelopment}. We can therefore assume that $X(t)\leq 0$ for all
  $t\leq t_0$. Combining this assumption with the assumptions of the lemma yields $X(t)/\theta^2(t)\rightarrow 0$ as $t\downarrow t_-$. Next, again by
  the assumptions of the lemma, $V\circ\phi(t)/\theta^2(t)\rightarrow 0$ as $t\downarrow t_-$. On the other hand, (\ref{eq:Xdef}) can be reformulated to
  \begin{equation}\label{eq:X ver HC}
    1=\frac{X}{\theta^2}+\frac{3}{2}\frac{\phi_t^2}{\theta^2}+\frac{3V\circ\phi}{\theta^2},
  \end{equation}
  so that $3\phi_t^2(t)/\theta^2(t)\rightarrow 2$ as $t\downarrow t_-$. Combining the above observations and assumptions with (\ref{eq:theta t ito X}) yields a
  $T\in J$ such that $\theta_t\leq -\theta^2/2$ for all $t\leq T$. This means that $\theta$ blows up in finite time to the past, so that $t_-=0$. All that
  remains to be demonstrated is that $\theta\notin L^{1}(0,t_1)$. However, this follows by the same argument as at the end of the proof of
  Lemma~\ref{lemma:BianchiAdevelopment}.
\end{proof}
In the case of Bianchi type IX with a vanishing potential, we obtain the following improvement. 
\begin{lemma}
  Let $V=0$ and $\mfI\in \mB_{\mrIX}[V]$. Let $J$ and $\theta$ be the existence interval and mean curvature, respectively, of the corresponding
  development, see Proposition~\ref{prop:unique max dev}. Then $J=(t_{-},t_{+})$, where $t_{-},t_{+}\in\rn{}$. Moreover, if $t_{0}\in (t_{-},t_{+})$,
  then $\theta(t)\rightarrow\infty$ as $t\downarrow t_{-}$; $\theta(t)\rightarrow-\infty$ as $t\uparrow t_{+}$; $\theta\notin L^{1}(t_{-},t_{0})$;
  and $\theta\notin L^{1}(t_{0},t_{+})$.
\end{lemma}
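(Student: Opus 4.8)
The plan is to deduce everything from Lemma~\ref{lemma:BianchiAdevelopment} applied in \emph{both} time directions, together with the special structure of Bianchi type IX vacuum. Since $V=0$, the function $X$ defined in (\ref{eq:Xdef}) satisfies (\ref{eq:Xaltversion}), i.e. $X=-\tfrac32\bS+\tfrac32\sigma_{ij}\sigma^{ij}$, and the scalar field equation (\ref{eq:phiddot}) becomes $\phi_{tt}=-\theta\phi_t$, so $\phi_t$ never changes sign and in fact $\phi_t(t)=\phi_t(t_0)\exp(-\Theta(t))$ along any fixed reference point. First I would observe that the development cannot be isotropic Bianchi type I (it is genuinely of type IX with all $n_i$ of the same sign, which we may take to be positive after a sign change), so Lemma~\ref{lemma:BianchiAdevelopment} in the forward direction is not the relevant branch; instead I need an argument that is symmetric under $t\mapsto -t$.

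\textbf{Key steps.} (1) Note that $\theta$ must vanish somewhere in $J$: if $\theta>0$ throughout, then by the argument in the proof of Lemma~\ref{lemma:BianchiAdevelopment} (the branch where $\mfT=\mrIX$, $X$ and $\theta$ positive, using $f=n_1n_2n_3$ with $f_t=-\theta f$ and Lemma~\ref{lemma: bS low bd} to bound $\bS$ from above) one gets $\theta\notin L^1$ near the past endpoint and past blow-up; but then by \emph{the same argument run in the opposite time direction} one would also need $\theta$ to blow up to the future, which is incompatible with $\theta_t=-\tfrac13(1+q)\theta^2\le 0$ forcing $\theta$ to \emph{decrease}, hence stay bounded, to the future. (More directly: if $\theta>0$ on all of $J$, then by (\ref{eq:dotthetaprelver}) and $q\in[-1,2]$, $\theta$ is nonincreasing; Remark~\ref{remark:improved dichotomy} then gives $t_+=+\infty$ and $\theta$ bounded to the future, while the $f$-and-$\bS$ argument gives $\theta\to\infty$ at the past endpoint which, by Remark~\ref{remark:improved dichotomy} again applied past-wards with the monotonicity, forces that endpoint to be finite; call it $0$. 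But $\theta\le\theta(0^+)$ finite on $[t_0,\infty)$ bounded below by $0$, combined with (\ref{eq:theta t ito X}) — here $\theta_t=-\sigma_{ij}\sigma^{ij}-\tfrac13 X$ — forces $\sigma\to 0$ and $X\to 0$, hence $\bS=\sigma_{ij}\sigma^{ij}-\tfrac23 X\to 0$; but for type IX with $\theta$ bounded below and the scale factors evolving by (\ref{eq:for the ai}), $\bS$ cannot decay to zero unless the solution degenerates, a contradiction with $\mfI\in\mB_{\mrIX}[V]$.) So there is a $t_*\in J$ with $\theta(t_*)=0$.

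(2) At $t_*$, (\ref{eq:thetad}) with $V=0$ gives $\theta_t(t_*)=-\tfrac32\sigma_{ij}\sigma^{ij}-\tfrac12 n_{ij}n^{ij}+\tfrac14(\tr n)^2-\tfrac32\phi_t^2$, and the Hamiltonian constraint (\ref{eq:hamconfin}) at $t_*$ reads $0=\tfrac32\sigma_{ij}\sigma^{ij}+\tfrac32 n_{ij}n^{ij}-\tfrac34(\tr n)^2+\tfrac32\phi_t^2$, i.e. $\bS(t_*)=-n_{ij}n^{ij}+\tfrac12(\tr n)^2=-(\sigma_{ij}\sigma^{ij}+\phi_t^2)\le 0$; substituting, $\theta_t(t_*)=-\tfrac32\sigma_{ij}\sigma^{ij}-\tfrac32\phi_t^2+\tfrac12\bS(t_*)=-2(\sigma_{ij}\sigma^{ij}+\phi_t^2)<0$ (strict, since $\sigma_{ij}\sigma^{ij}=\phi_t^2=0$ at $t_*$ would force $\bS(t_*)=0$, hence $n_{ij}n^{ij}=\tfrac12(\tr n)^2$, impossible for all $n_i$ strictly positive and nonzero). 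Thus $\theta_t(t_*)<0$, so $\theta(t)>0$ for $t<t_*$ near $t_*$ and $\theta(t)<0$ for $t>t_*$ near $t_*$. By (\ref{eq:Xdef}) with $V=0$, $X=\theta^2-\tfrac32\phi_t^2$, and along any interval where $\theta>0$ the computation $X_t=-\tfrac23\theta X-2\theta\sigma_{ij}\sigma^{ij}$ from the proof of Lemma~\ref{lemma:BianchiAdevelopment} shows $X$ is monotone; one checks $\theta$ stays positive on all of $(t_-,t_*)$ (it cannot return to $0$ since $\theta_t<0$ wherever $\theta=0$, by the computation above which only used $\theta=0$), and symmetrically $\theta<0$ on all of $(t_*,t_+)$.

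(3) Now apply Lemma~\ref{lemma:BianchiAdevelopment} to the restriction of the solution to $(t_-,t_*)$: there $\theta>0$, the data are of (anisotropic) type IX, and at any interior point $X>0$ is not automatic — but from (\ref{eq:theta t ito X}), $\theta_t<0$ on $(t_-,t_*)$, so $\theta$ increases to the past; Remark~\ref{remark:improved dichotomy} (with $V=0\ge 0$) gives that $t_-$ is finite, $\theta(t)\to\infty$ as $t\downarrow t_-$, and the non-integrability argument at the end of the proof of Lemma~\ref{lemma:BianchiAdevelopment} (the $f=n_1n_2n_3$, $\bS$-bounded-above, $e^{2\Theta}\theta$ monotone argument) gives $\theta\notin L^1(t_-,t_0)$. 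Applying the time-reversed version of the same lemma to $(t_*,t_+)$ — equivalently, replacing $t$ by $-t$, which sends a type IX solution to a type IX solution with $\theta\mapsto-\theta$ — gives $t_+$ finite, $\theta(t)\to-\infty$ as $t\uparrow t_+$, and $\theta\notin L^1(t_0,t_+)$. This yields all four conclusions of the statement.

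\textbf{Main obstacle.} The delicate point is step~(1): ruling out the possibility that $\theta>0$ on all of $J$. For type IX, unlike the lower types, $\bS$ can be positive, so $X$ need not be nonnegative and the clean monotonicity used for $\mfT\ne\mrIX$ is unavailable. I expect to have to combine (\ref{eq:dotthetaprelver}) (forcing $\theta$ nonincreasing, hence future-complete by Remark~\ref{remark:improved dichotomy}) with the Lemma~\ref{lemma: bS low bd} mechanism (controlling $f=n_1n_2n_3$ and hence bounding $\bS$ from above to the past, forcing past blow-up at a finite time), and then derive a contradiction from the fact that a type IX solution with $\theta$ bounded to the future and $\bS$ forced toward $0$ cannot exist — this last degeneracy claim is the part that needs the most care, and it may be cleaner to invoke it via $X_t=-\tfrac23\theta X-2\theta\sigma_{ij}\sigma^{ij}$ together with $X=\theta^2-\tfrac32\phi_t^2\le\theta^2$ bounded, concluding $X$ has a finite limit, then $\theta\sigma_{ij}\sigma^{ij}\in L^1$ and $\sigma\to 0$, then $\bS\to$ a constant that must be nonpositive yet for all-positive $n_i$ with $\theta$ bounded below the $n_i$ are bounded and bounded away from degeneracy — giving the contradiction.
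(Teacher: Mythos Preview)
Your approach is genuinely different from the paper's: the paper does not argue from scratch but simply observes that a massless scalar field is equivalent to an orthogonal stiff fluid and then cites \cite[Lemmas~21.6, 21.8, 22.5]{BianchiIXattr} for the existence of a unique zero of $\theta$, the finiteness of $t_\pm$, and the non-integrability of $\theta$ near both endpoints. So the content you are trying to reprove here is precisely the Bianchi~IX recollapse theorem together with its refinements, which the paper outsources.

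There are two concrete problems with your write-up. First, in step~(2) you have a sign error: from the Hamiltonian constraint with $V=0$ and $\theta(t_*)=0$ one gets $\bS(t_*)=\sigma_{ij}\sigma^{ij}+\phi_t^2\ge 0$, not $\le 0$. The conclusion $\theta_t(t_*)=-(\sigma_{ij}\sigma^{ij}+\phi_t^2)\le 0$ is still correct (you get $-1$ times, not $-2$ times, the bracket), but your justification of strict inequality is wrong: $n_{ij}n^{ij}=\tfrac12(\tr n)^2$ \emph{is} possible for all $n_i>0$, e.g.\ $(n_1,n_2,n_3)=(4,1,1)$. The degenerate case $\theta=\sigma=\phi_t=0$, $\bS=0$ at $t_*$ can be handled (one checks $s_{lm}\neq 0$ there since $\bS=0$ forces the $n_i$ to be unequal, and then $\theta$ has a third-order zero with $\theta_{ttt}(t_*)=-2s_{ij}s^{ij}<0$), but this needs to be said.

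Second, and more seriously, your step~(1) --- ruling out $\theta>0$ on all of $J$ --- is exactly the Bianchi~IX recollapse statement, and your sketch does not close. With $V=0$ the Raychaudhuri equation gives $\theta_t=-\sigma_{ij}\sigma^{ij}-\phi_t^2-\tfrac13\theta^2$, so if $\theta>0$ throughout then $\theta$ is strictly decreasing, $t_+=\infty$, and $(1/\theta)_t\ge 1/3$ forces $\theta\to 0$; but from there your ``$\bS\to$ constant, $n_i$ bounded away from degeneracy'' argument is hand-waving. You would need to control the individual $n_i$ (not just their product $f=n_1n_2n_3$) as $t\to\infty$ and extract a contradiction from the constraint, and this is precisely the nontrivial content of \cite[Lemma~21.6]{BianchiIXattr}. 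Unless you are prepared to reproduce that argument, you should cite it as the paper does.
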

\begin{proof}
  Note, to begin with, that, in the spatially homogeneous setting, orthogonal stiff fluids are equivalent to a scalar field. In fact, letting
  $\rho=\phi_{t}^{2}/2$, where $\phi$ is the scalar field, the stress energy tensor of the scalar field can be interpreted as the stress energy
  tensor of an orthogonal stiff fluid; see, e.g., \cite[Section~1.4]{RinQCSymm}. Given this correspondence, the statements are essentially immediate
  consequences of results in \cite{BianchiIXattr}. In particular, the statement
  concerning the existence interval follows from \cite[Lemma~21.8, p.~493]{BianchiIXattr}. In order to deduce the statements concerning the mean
  curvature, note that, according to \cite[Lemma~21.6, p.~492]{BianchiIXattr}, there is a $t_{0}\in (t_{-},t_{+})$ such that the mean curvature
  $\theta$ is strictly positive on the interval $I_{-}:=(t_{-},t_{0})$ and strictly negative on $I_{+}:=(t_{0},t_{+})$. Next, let the time coordinates
  $\tau_{\mp}$ be defined on $I_{\mp}$ (up to constants) by $d\tau_{\mp}/dt=\theta/3$, see \cite[(137), p.~487]{BianchiIXattr}. Then the asymptotic regime
  $t\rightarrow t_{\pm}$ corresponds to $\tau_{\pm}(t)\rightarrow -\infty$; see \cite[Lemma~22.5, p.~498]{BianchiIXattr}. Note that this implies the
  non-integrability statements of the proposition. Finally, since the mean curvature $\theta$ satisfies $\d_{\tau_{\pm}}\theta=-(1+q)\theta$, see
  \cite[(139), p.~487]{BianchiIXattr}, where $q=2\Omega+2\Sigma_{+}^{2}+2\Sigma_{-}^{2}$, $\Omega=3\rho/\theta^{2}$ and $\rho\geq 0$
  is the energy density of the scalar field, the statements concerning the divergence of the mean curvature follow. 
\end{proof}

When relating solutions to asymptotic data, it is useful to know that the following refinement of Lemma~\ref{lemma:BianchiAdevelopment} holds.

\begin{lemma}\label{lemma:bth large enough}
  Assume, in addition to the conditions of Lemma~\ref{lemma:BianchiAdevelopment}, that $\mfT\neq\mrIX$ and that $V\in\mfP_{\ropar}$; see
  Definition~\ref{def:mfPdef}. If $\bth>[3v_{\max}(V)]^{1/2}$, then, using the notation of Lemma~\ref{lemma:BianchiAdevelopment},
  $\{\theta(t)\, |\, t\in J\}$ contains $([3v_{\max}(V)]^{1/2},\infty)$. If $(\mfT,\mfs)\neq(\mrI,\iso)$, then $\theta_t<0$ for all $t$. If
  $(\mfT,\mfs)=(\mrI,\iso)$, then $\theta_t\leq 0$ and the zeros of $\theta_t$ are isolated. 
\end{lemma}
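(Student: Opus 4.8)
The plan is to work with the Wainwright–Hsu type variables introduced in Lemma~\ref{lemma:FW frame eqns}, keeping in mind the setup of Lemma~\ref{lemma:BianchiAdevelopment}. By that lemma, since $V\in\mfP_{\ropar}$ is in particular non-negative and $\mfT\neq\mrIX$, every development of $\mfI\in\mB_\mfT[V]$ with initially non-negative mean curvature has existence interval $J=\ro$ or $J=(0,\infty)$, with $\theta(t)>0$ throughout and $\theta(t)\to\infty$ as $t\downarrow 0$ in the latter case, and $\theta$ converges to a non-negative number as $t\to\infty$. So the image $\{\theta(t)\,|\,t\in J\}$ is an interval of the form $(\theta_\infty,\infty)$ or $(\theta_\infty,\theta(0)]$ with $\theta_\infty:=\lim_{t\to\infty}\theta(t)\geq 0$ in the first case. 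The content of the lemma is twofold: first, that $\theta_t<0$ (so $\theta$ is genuinely strictly decreasing, hence the image really is an interval and $\bth:=\theta(t_0)$ for a chosen $t_0$ can be any prescribed value); second, that the image contains all of $([3v_{\max}(V)]^{1/2},\infty)$, which amounts to showing $\theta_\infty\leq[3v_{\max}(V)]^{1/2}$.

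For the monotonicity statement I would use (\ref{eq:theta t ito X}), i.e. $\theta_t=-\sigma_{ij}\sigma^{ij}-\tfrac13 X-\tfrac32\phi_t^2$, together with (\ref{eq:Xaltversion}), $X=-\tfrac32\bS+\tfrac32\sigma_{ij}\sigma^{ij}$. For $\mfT\neq\mrIX$ we have $\bS\leq 0$, so $X\geq 0$, and in fact $\theta_t=\tfrac12\bS-2\sigma_{ij}\sigma^{ij}-\tfrac32\phi_t^2\leq 0$. Equality in $\theta_t=0$ forces $\bS=0$, $\sigma_{ij}=0$ and $\phi_t=0$ simultaneously; by the discussion preceding Definition~\ref{def:degenerate version of id} (the argument characterising trivial data), $\bS=0$ together with $\sigma=0$ forces Bianchi type I and $K$ proportional to the identity, i.e.\ the data at that time are of the form $\bn=0$, $K=\theta\roId/3$, $\phi_t=0$; and then the Hamiltonian constraint (\ref{eq:hamconfin}) gives $\theta^2=3V\circ\phi$. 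This can only happen if $(\mfT,\mfs)=(\mrI,\iso)$, since non-isotropic data never induce isotropic data at a later time (Proposition~\ref{prop:unique max dev}); this gives $\theta_t<0$ for all $t$ when $(\mfT,\mfs)\neq(\mrI,\iso)$. In the isotropic Bianchi type I case, $\theta_t\leq 0$ always, and at a zero $t_1$ of $\theta_t$ we have $\phi_t(t_1)=0$ and $\theta^2(t_1)=3V(\phi(t_1))$; differentiating and using (\ref{eq:phiddot}) and (\ref{eq:thetad}) one finds $\theta_{tt}(t_1)=-\tfrac32\phi_{tt}^2(t_1)=-\tfrac32 V'(\phi(t_1))^2$ — wait, more carefully: from $\theta_t=-\tfrac32\phi_t^2$ in the isotropic BI case (where $\sigma=0$, $\bS=0$), $\theta_{tt}=-3\phi_t\phi_{tt}$, which vanishes at $t_1$; then $\theta_{ttt}(t_1)=-3\phi_{tt}(t_1)^2=-3V'(\phi(t_1))^2\leq 0$, and it is $<0$ unless $V'(\phi(t_1))=0$. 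If $V'(\phi(t_1))=0$ then the triple $(\theta(t_1),\phi(t_1),\phi_t(t_1))=(\sqrt{3V(\phi(t_1))},\phi(t_1),0)$ is an equilibrium and, by uniqueness for (\ref{seq:EFEwrtvar})--(\ref{eq:phiddot}), $\theta$ would be constant — but then the data would be trivial, contradicting $\mfI\in\mB[V]$. So $V'(\phi(t_1))\neq 0$, hence $\theta_{ttt}(t_1)<0$, so $t_1$ is an isolated zero of $\theta_t$ (a strict local max/inflection pattern forcing $\theta_t<0$ on a punctured neighbourhood). This gives the "zeros are isolated" claim.

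For the surjectivity-onto-$([3v_{\max}(V)]^{1/2},\infty)$ statement I would argue as follows. We have the prescribed value $\bth>[3v_{\max}(V)]^{1/2}$ attained at some $t_0\in J$ (after time translation, say $t_0=0$ if $J=(0,\infty)$). Since $\theta$ is non-increasing with $\theta(t)\to\infty$ as $t\to 0^+$ (or as $t\to-\infty$ if $J=\ro$), the image $\{\theta(t)\,|\,t\in J\}$ contains $[\bth,\infty)$ already; the only thing to check is that $\theta$ continues to decrease below $\bth$ all the way down to (or below) $[3v_{\max}(V)]^{1/2}$, i.e. that $\theta_\infty:=\lim_{t\to\infty}\theta(t)\leq[3v_{\max}(V)]^{1/2}$. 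Suppose instead $\theta_\infty>[3v_{\max}(V)]^{1/2}$. Since $\theta\to\theta_\infty$ as $t\to\infty$, $\theta_t\to 0$, and from (\ref{eq:theta t ito X}) we get $\sigma_{ij}\sigma^{ij}\to 0$, $X\to 0$ and $\phi_t\to 0$. From (\ref{eq:Xaltversion}), $\bS\to 0$ as well. Then (\ref{eq:hamconfin}) forces $V\circ\phi\to\theta_\infty^2/3>v_{\max}(V)$. On the other hand $\phi_t\to 0$ and (\ref{eq:phiddot}) give $V'\circ\phi=-\phi_{tt}-\theta\phi_t$; using boundedness of $\theta$ and a standard argument (integrating $\phi_t$, or invoking that $\phi_t\to 0$ with $\phi$ monotone-ish) one shows $\phi(t)$ converges to a finite limit $\phi_\infty$ or diverges. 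If $\phi(t)\to\phi_\infty$ finite, then $V(\phi_\infty)=\theta_\infty^2/3$ and $V'(\phi_\infty)=0$ (limit of $\phi_{tt}$ must be zero since $\phi_t\to0$ and $\phi_t$ is eventually monotone), so $V(\phi_\infty)\leq v_{\max}(V)$ by Definition~\ref{def:mfPdef} — contradiction. If $|\phi(t)|\to\infty$, then $V\circ\phi\to\theta_\infty^2/3$ means $\limsup_{s\to+\infty}V(s)$ or $\limsup_{s\to-\infty}V(s)$ equals $\theta_\infty^2/3$; but $V\circ\phi$ actually converges, so the relevant limit $v_{\infty,\pm}=\lim_{s\to\pm\infty}V(s)$ exists and equals $\theta_\infty^2/3>v_{\max}(V)\geq v_{\infty,\pm}$ — contradiction. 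Hence $\theta_\infty\leq[3v_{\max}(V)]^{1/2}$, and the image contains $([3v_{\max}(V)]^{1/2},\infty)$.

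The main obstacle I anticipate is the last step — controlling the limiting behaviour of $\phi$ when $\theta$ tends to a positive limit. One has to rule out the scenario where $\phi$ oscillates with $\phi_t\to 0$ but $V\circ\phi$ nonetheless converging; the cleanest route is to exploit (\ref{eq:phiddot}) and the convergence $\phi_t\to 0$ to show that $\phi_{tt}$ has a limit (necessarily $0$), hence $\theta\phi_t+V'\circ\phi\to0$, hence $V'\circ\phi\to0$, and then combine with the fact that $V\circ\phi$ converges and $V\in\mfP_{\ropar}$ (so $V'$ has one-sided limits) to pin down the limit of $\phi$ or at least of $V\circ\phi$ and deduce the contradiction with the definition of $v_{\max}(V)$. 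This requires care about whether $\phi$ is eventually monotone, which follows from $\phi_t\to0$ together with $V'$ having fixed sign near $\pm\infty$ (Lemma~\ref{lemma:V prime sign} and Remark~\ref{remark:divergence to infinity V}) in the divergent case, and is automatic in the convergent case. Everything else — the monotonicity of $\theta$ and the isolatedness of zeros — is a direct computation from (\ref{seq:EFEwrtvar})--(\ref{eq:phiddot}) plus the trivial-data characterisation already in the excerpt.
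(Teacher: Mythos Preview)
Your approach to the monotonicity and the isolated-zero statements is essentially the same as the paper's (the paper cites Lemma~\ref{lemma:BianchiAdevelopment} directly for $\theta_t<0$ when $(\mfT,\mfs)\neq(\mrI,\iso)$, and for the isotropic Bianchi~I case argues more directly that $\phi_{tt}(t_1)=-V'(\phi(t_1))\neq 0$, hence $\phi_t\neq 0$ in a punctured neighbourhood, hence $\theta_t=-\tfrac32\phi_t^2<0$ there; your third-derivative computation reaches the same conclusion). Your future-limit argument, showing $\theta_\infty\leq[3v_{\max}(V)]^{1/2}$, is also on the same lines as the paper's, which invokes \cite[Corollaries~26.4, 26.7]{stab} rather than re-deriving the convergence of $\phi_t$, $V\circ\phi$, and $V'\circ\phi$; your sketch of this part is fine.

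There is, however, one genuine gap. You assert that ``$\theta(t)\to\infty$ as $t\to 0^+$ (or as $t\to-\infty$ if $J=\ro$)'', and use this to conclude the image already contains $[\bth,\infty)$. For $(\mfT,\mfs)\neq(\mrI,\iso)$ this is given by Lemma~\ref{lemma:BianchiAdevelopment}, but for isotropic Bianchi~I with $J=\ro$ it is \emph{not}: solutions with $\theta$ bounded to the past do exist (cf.\ Remark~\ref{remark:Exceptional Bianchi type I}), so the hypothesis $\bth>[3v_{\max}(V)]^{1/2}$ must actually be used to rule this out. The paper does exactly this: assuming $\theta$ bounded to the past, it shows $\theta\to\theta_-$, $\phi_t\to 0$ (via integrability of $\phi_t^2$ from $\theta_t=-\tfrac32\phi_t^2$), $V\circ\phi$ converges, $\phi$ converges (finite or infinite), $V'\circ\phi\to 0$, and then reaches the same dichotomy --- $\theta_-^2/3=V(s_0)$ with $V'(s_0)=0$, or $\theta_-^2/3=v_{\infty,\pm}$ --- contradicting $\theta_-\geq\bth>[3v_{\max}(V)]^{1/2}$. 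This is precisely the mirror of your future argument; you just need to run it backwards as well.
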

\begin{proof}
  If $(\mfT,\mfs)\neq(\mrI,\iso)$, then $\theta_t<0$ for all $t$ due to Lemma~\ref{lemma:BianchiAdevelopment}. If $(\mfT,\mfs)=(\mrI,\iso)$, then
  $\theta_t=-3\phi_t^2/2\leq 0$; see (\ref{eq:thetad}). In particular, if $\theta_t(t_1)=0$, then $\phi_t(t_1)=0$. If, in addition, $V'[\phi(t_1)]=0$,  
  then $\phi(t)=\phi(t_1)$ for all $t$; cf. (\ref{eq:phiddot}). This means that the original initial data had to be trivial, contradicting the assumptions.  
  If $\phi_t(t_1)=0$, we must thus have $V'[\phi(t_1)]\neq 0$, so that $\phi_{tt}(t_1)\neq 0$ due to (\ref{eq:phiddot}). Thus $\phi_t$ is non-zero in a
  punctured neighbourhood of $t_1$, so that $\theta_t<0$ in a punctured neighbourhood of $t_1$. What remains to be demonstrated is that the desired
  statement concerning the range of the mean curvature holds. 
  
  Due to \cite[Corollary~26.4, p.~450]{stab}, we know that there is a $0\leq V_{1}\in\rn{}$ such that $\theta(t)\rightarrow (3V_{1})^{1/2}$ and
  $\phi_{t}^{2}(t)+2V\circ\phi(t)\rightarrow 2V_{1}$ as $t\rightarrow\infty$. If $V_1=0$, the desired statement follows, assuming $\theta$ diverges to
  $\infty$ to the past. We can therefore assume that $V_1>0$. Since $\theta$ is monotonically decreasing, see (\ref{eq:thetad}), $\theta(t)\geq (3V_1)^{1/2}$
  for all $t$. Due to the arguments presented in the proof of \cite[Corollary~26.7, p.~452]{stab}, we also know that $\phi_{t}(t)\rightarrow 0$ as
  $t\rightarrow\infty$ and that $\phi(t)$ converges to a limit (finite or infinite). Moreover,
  $V'\circ\phi(t)\rightarrow 0$ as $t\rightarrow\infty$. Assume that $\phi(t)$ converges to a finite limit, say $s_0$. Then $V'(s_0)=0$ and it is clear
  that $\theta(t)$ converges to $[3V(s_0)]^{1/2}$. If we know that $\theta$ diverges to $\infty$ to the past, the desired statement follows in this case.
  If $\phi$ tends to an infinite limit, then either $V_1=v_{\infty,+}$ or $V_1=v_{\infty,-}$, using the terminology of Definition~\ref{def:mfPdef}. Again, the
  statement follows in this case, assuming $\theta$ diverges to $\infty$ to the past. If $(\mfT,\mfs)\neq(\mrI,\iso)$, $\theta$ diverges to $\infty$
  to the past due to Lemma~\ref{lemma:BianchiAdevelopment}. It therefore only remains to prove divergence for $(\mfT,\mfs)=(\mrI,\iso)$.

  Assume, in order to obtain a contradiction, that $\theta$ is bounded to the past. Since $\theta$ is monotonically decreasing, this means that $\theta$
  converges to $\theta_-\in (0,\infty)$ to the past. Moreover, $J=\ro$ due to Remark~\ref{remark:improved dichotomy}. Due to (\ref{eq:hamconfin}), $\phi_t$
  and $V\circ\phi$ (and thereby $V'\circ\phi$; see Definition~\ref{def:mfPdef}) remain bounded to the past. Due to (\ref{eq:phiddot}), this
  means that $\phi_t$ and $\phi_{tt}$ remain bounded to the past. Since $\phi_t^2$ is integrable to the past, this means that $\phi_t$ converges to zero to the
  past. Combining this observation with (\ref{eq:hamconfin}) and the fact that $\theta$ converges, we conclude that $V\circ\phi(t)$ converges to the past. By
  an argument similar to
  the proof of \cite[Corollary~26.7, p.~452]{stab}, we conclude that $\phi$ converges to a limit (finite or infinite). Due to the assumptions, this means that
  $V'\circ\phi$ converges to a limit. Moreover, this limit is a real number since $V\circ\phi$ remains bounded to the past. Since $\theta$ converges to a
  finite value, $\phi_t$ converges to zero and $V'\circ\phi$ converges, it follows from (\ref{eq:phiddot}) that $V'\circ\phi$ has to converge to zero. If
  $\phi(t)$ converges to a finite number, say $s_0$, we conclude that $V'(s_0)=0$. This means that $\theta(t)\rightarrow [3V(s_0)]^{1/2}$, contradicting the
  assumptions (note that $\theta$ is monotonically increasing to the past). Assume now that $\phi(t)\rightarrow\pm\infty$. Then $V\circ\phi(t)$ converges
  to $v_{\infty,\pm}$. This means that $\theta(t)\rightarrow (3v_{\infty,\pm})^{1/2}$, again contradicting the assumptions. To conclude, the assumption that
  $\theta$ does not diverge to infinity to the past yields a contradiction. 
\end{proof}

\section{Wainwright-Hsu formulation}\label{ssection:whsuform}
In order to analyse the asymptotics, it is convenient to expansion normalise the variables in analogy with the work of Wainwright and Hsu
\cite{wah}. We begin by changing the time coordinate.

\begin{lemma}\label{lemma:tautimecoord}
  Assume the conditions of Lemma~\ref{lemma:BianchiAdevelopment} to be satisfied with $\mfT\neq\mrIX$ and let $J$ denote the existence interval of the
  development. Let $\tau$ be a function satisfying
  \begin{equation}\label{eq:dtdtau}
    \frac{d \tau}{dt}=\frac{\theta}{3}.
  \end{equation}
  \index{$\a$Aa@Notation!Time coordinates!tau@$\tau$}%
  Then $\tau$ is a smooth diffeomorphism from $J$ to $\rn{}$.
\end{lemma}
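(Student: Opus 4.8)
The plan is to exploit the two structural alternatives for $J$ furnished by Lemma~\ref{lemma:BianchiAdevelopment}, together with the positivity and non-integrability of $\theta$ recorded there. First I would observe that, since $(M,g,\phi)$ is a smooth development (Proposition~\ref{prop:unique max dev}), the mean curvature $\theta$ is a smooth function on $J$; fixing any $t_*\in J$ and setting $\tau(t):=\int_{t_*}^t\theta(s)/3\,ds$ produces a smooth function on $J$ satisfying (\ref{eq:dtdtau}), and any other solution of (\ref{eq:dtdtau}) differs from this one by an additive constant, so it suffices to treat this particular $\tau$. Since the conditions of Lemma~\ref{lemma:BianchiAdevelopment} are assumed and $\mfT\neq\mrIX$, that lemma gives $\theta(t)>0$ for every $t\in J$; hence $d\tau/dt=\theta/3>0$ throughout $J$, so $\tau$ is a strictly increasing smooth function, in particular an injective smooth map onto the open interval $\tau(J)$, with smooth inverse by the inverse function theorem (the derivative never vanishing).

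It then remains to show $\tau(J)=\rn{}$, i.e. that $\tau(t)$ tends to $-\infty$ as $t$ approaches the left endpoint of $J$ and to $+\infty$ as $t$ approaches the right endpoint. Here I would split according to the cases in Lemma~\ref{lemma:BianchiAdevelopment}. If $\mfI\in\mB_{\mrI}^{\iso}[V]$ and $J=\rn{}$, then $\theta\notin L^1(-\infty,0)$ and $\theta\notin L^1(0,\infty)$; since $\theta>0$, this forces $\int_{t}^{0}\theta(s)/3\,ds\to+\infty$ as $t\to-\infty$ and $\int_{0}^{t}\theta(s)/3\,ds\to+\infty$ as $t\to+\infty$, i.e. $\tau(t)\to\mp\infty$ at the two ends (up to the harmless shift coming from the choice of $t_*$). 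In the remaining configurations — either $\mfI\in\mB_{\mrI}^{\iso}[V]$ with $J=(0,\infty)$, or $\mfI$ not isotropic with $J=(0,\infty)$ — Lemma~\ref{lemma:BianchiAdevelopment} gives $\theta\notin L^1(0,1)$ and $\theta\notin L^1(1,\infty)$, and the same reasoning yields $\tau(t)\to-\infty$ as $t\downarrow 0$ and $\tau(t)\to+\infty$ as $t\to\infty$. In every case $\tau(J)=\rn{}$, so $\tau$ is a smooth diffeomorphism from $J$ onto $\rn{}$.

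Since essentially all the analytic content has already been placed in Lemma~\ref{lemma:BianchiAdevelopment}, I do not anticipate a genuine obstacle; the only point needing care is the bookkeeping of endpoints — verifying that the ``$\theta$ initially non-negative'' hypothesis of Lemma~\ref{lemma:BianchiAdevelopment} is in force (it is, given that it is part of the assumed hypotheses and that $\theta>0$ propagates on $J$) and that in each of the three configurations the non-integrability statements are being invoked at the correct end of $J$.
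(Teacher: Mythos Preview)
Your proposal is correct and follows essentially the same route as the paper: invoke the positivity of $\theta$ and the non-integrability of $\theta$ at both endpoints of $J$ from Lemma~\ref{lemma:BianchiAdevelopment}, then read off that $\tau$ is a strictly increasing smooth function whose range is all of $\rn{}$. The paper's proof is just a terse version of yours, citing exactly these facts and declaring the statement to follow.
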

\begin{proof}
  Due to Lemma~\ref{lemma:BianchiAdevelopment}, we know that $\theta$ is strictly positive, that $\theta\notin L^{1}(t_-,t_0)$ and that
  $\theta\notin L^{1}(t_0,\infty)$, where $J=(t_-,\infty)$ and $t_0\in J$. The statement follows.  
\end{proof}
In the Bianchi type IX setting, the following result holds.
\begin{lemma}\label{lemma:tautimecoord BIX}
  Assume that either the conditions of Lemma~\ref{lemma:BianchiAdevelopment} are satisfied with $\mfT=\mrIX$ or that the conditions of
  Lemma~\ref{lemma:Bianchi IX remainder} are satisfied. Let $J=(0,t_+)$ denote the existence interval of the development. Then there is a $t_0\in J$
  such that $\theta(t)>0$ for $t\leq t_0$. Define $\tau$ by $\tau(t_0)=0$ and (\ref{eq:dtdtau}). Then $\tau$ is a smooth diffeomorphism from $(0,t_0]$ to
  $(-\infty,0]$. 
\end{lemma}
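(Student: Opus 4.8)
The plan is to reduce everything to two facts that Lemma~\ref{lemma:BianchiAdevelopment} and Lemma~\ref{lemma:Bianchi IX remainder} already supply in the Bianchi type $\mrIX$ setting: that $\theta$ is positive on an initial interval $(0,t_0]$, and that $\theta\notin L^1(0,t_0)$. First I would fix such a $t_0$. If the hypotheses of Lemma~\ref{lemma:Bianchi IX remainder} hold, take $t_0:=t_1$ with $t_1$ as in that lemma; it asserts directly that $\theta(t)>0$ for $t\leq t_0$, that $\theta(t)\to\infty$ as $t\downarrow 0$, and that $\theta\notin L^1(0,t_0)$. If instead the hypotheses of Lemma~\ref{lemma:BianchiAdevelopment} hold with $\mfT=\mrIX$ (so $\mfI\in\mB_{\mrIX,+}[V]$), take the $t_0$ produced by that lemma; then $\theta_t<0$ on $(0,t_0]$, $\theta(t)\to\infty$ as $t\downarrow 0$, $\theta\notin L^1(0,t_0)$, and $\theta(t_0)>0$ (this is how $t_0$ is chosen in its proof; alternatively, $X(t_0)>0$ forces $\theta(t_0)\neq 0$, and the intermediate value theorem applied to the continuous function $\theta$ on $(0,t_0]$, together with $X>0$ and $\theta\to\infty$ at $0$, rules out $\theta(t_0)<0$). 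Since $\theta$ is strictly decreasing on $(0,t_0]$ in this case, $\theta(t)\geq\theta(t_0)>0$ there. In either case $\theta>0$ on $(0,t_0]$, which is the first assertion.

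Next I would study $\tau$ defined by (\ref{eq:dtdtau}) and $\tau(t_0)=0$; explicitly, $\tau(t)=\tfrac{1}{3}\int_{t_0}^{t}\theta(s)\,ds$. Because $\theta$ is smooth and $d\tau/dt=\theta/3>0$ on $(0,t_0]$, the function $\tau$ is smooth and strictly increasing on $(0,t_0]$, hence a diffeomorphism onto its image. The image is an interval whose right endpoint is $\tau(t_0)=0$ (attained), and whose left endpoint is $\lim_{t\downarrow 0}\tau(t)=-\tfrac{1}{3}\int_{0}^{t_0}\theta(s)\,ds$. Since $\theta\notin L^1(0,t_0)$, this limit equals $-\infty$. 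Therefore $\tau$ is a smooth diffeomorphism from $(0,t_0]$ onto $(-\infty,0]$, as claimed.

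There is no genuinely hard step: the substantive work is already done in Lemmas~\ref{lemma:BianchiAdevelopment} and \ref{lemma:Bianchi IX remainder}, and the remainder is a monotonicity-plus-non-integrability argument for the integral defining $\tau$. The only point needing minor care is treating the two alternative hypotheses uniformly, i.e.\ checking that both the positive case ($\mfI\in\mB_{\mrIX,+}[V]$) and the almost positive case ($\mfI\in\mB_{\mrIX,\roap}[V]$, via Lemma~\ref{lemma:Bianchi IX remainder}) furnish the same two properties of $\theta$ on $(0,t_0]$.
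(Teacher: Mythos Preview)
Your proof is correct and follows the same approach as the paper, which simply writes ``The conclusion follows from the definition of $\tau$ and Lemmas~\ref{lemma:BianchiAdevelopment} and \ref{lemma:Bianchi IX remainder}.'' You have merely unpacked the two ingredients (positivity of $\theta$ on $(0,t_0]$ and $\theta\notin L^1(0,t_0)$) that those lemmas supply and shown explicitly how the monotone integral $\tau(t)=\tfrac13\int_{t_0}^t\theta$ then maps $(0,t_0]$ onto $(-\infty,0]$; the slight detour verifying $\theta(t_0)>0$ in the $\mB_{\mrIX,+}$ case is unnecessary, since the proof of Lemma~\ref{lemma:BianchiAdevelopment} already records $\theta(t)>0$ for all $t\le t_0$.
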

\begin{proof}
  The conclusion follows from the definition of $\tau$ and Lemmas~\ref{lemma:BianchiAdevelopment} and \ref{lemma:Bianchi IX remainder}. 
\end{proof}
In what follows, we assume that either the conditions of Lemma~\ref{lemma:tautimecoord} or of Lemma~\ref{lemma:tautimecoord BIX} are satisfied, and we
restrict the time coordinate $\tau$ accordingly. Moreover, we expansion normalise the variables as in \cite{wah} (see also
\cite[Section~22.1, p.~233]{RinCauchy}). In other words, 
\index{$\a$Aa@Notation!Expansion normalised!Variables!Sigmaij@$\Sigma_{ij}$}%
\index{$\a$Aa@Notation!Expansion normalised!Variables!Nij@$N_{ij}$}%
\index{$\a$Aa@Notation!Expansion normalised!Variables!Sigmap@$\Sp$}%
\index{$\a$Aa@Notation!Expansion normalised!Variables!Sigmam@$\Sm$}%
\begin{equation}\label{eq:SigmaijNij}
  \Sigma_{ij}:=\frac{\sigma_{ij}}{\theta},\ \ N_{ij}:=\frac{n_{ij}}{\theta},\ \
  \Sp:=\frac{3}{2}(\Sigma_{22}+\Sigma_{33}),\ \ \Sm:=\frac{\sqrt{3}}{2}(\Sigma_{22}-\Sigma_{33}). 
\end{equation}
Let $N_{i}$ and $\Sigma_i$
\index{$\a$Aa@Notation!Expansion normalised!Variables!Sigmai@$\Sigma_{i}$}%
\index{$\a$Aa@Notation!Expansion normalised!Variables!Ni@$N_{i}$}%
be the diagonal elements of $N_{ij}$ and $\Sigma_{ij}$ respectively. In the derivations to follow, it is convenient to keep
in mind that
\begin{equation}\label{eq:Sigma sq}
  3(\Sigma_1^2+\Sigma_2^2+\Sigma_3^2)=3\Sigma_{ij}\Sigma^{ij}=2(\Sigma_{+}^{2}+\Sigma_{-}^{2}).
\end{equation}
Next, let 
\begin{subequations}\label{seq:PsiPhiphiiqdef}
  \begin{align}    
    \phi_{0} := & \phi,\ \ \ \phi_{1} := \phi',\label{eq:phizphiodef}\\
    \Psi := & \frac{1}{6}\phi_{1}^{2}+\frac{3V\circ\phi_{0}}{\theta^{2}},\ \ \ 
    \Omega := \frac{9}{2}\frac{V\circ\phi_{0}}{\theta^{2}},\label{eq:PsiOmegadef}\\
    q := & 2(\Sigma_{+}^{2}+\Sigma_{-}^{2}+\Psi-\Omega).\label{eq:qdef}
  \end{align}
\end{subequations}
\index{$\a$Aa@Notation!Expansion normalised!Variables!phiz@$\phi_0$}%
\index{$\a$Aa@Notation!Expansion normalised!Variables!phio@$\phi_1$}%
\index{$\a$Aa@Notation!Expansion normalised!Expressions!Psi@$\Psi$}%
\index{$\a$Aa@Notation!Expansion normalised!Expressions!Omega@$\Omega$}%
\index{$\a$Aa@Notation!Expansion normalised!Expressions!q@$q$}%
In (\ref{eq:phizphiodef}), the prime denotes a derivative with respect to $\tau$. With notation as in (\ref{seq:PsiPhiphiiqdef}), equation
(\ref{eq:Riczzeq}) yields
\begin{equation}\label{eq:thetaprime}
  \theta'=-(1+q)\theta.
\end{equation}
In terms of the expansion normalised variables, the equations (\ref{eq:nijd}) and (\ref{eq:silmd}) read
\begin{subequations}\label{seq:SigmaiNiprprel}
  \begin{align}
    N_{i}' = & (q+6\Sigma_{i})N_{i},\label{eq:Niprprel}\\
  \Sigma_{i}' = & -(2-q)\Sigma_{i}-3S_i\label{eq:Sigmaiprprel}
  \end{align}
\end{subequations}
respectively, where
\begin{equation}\label{eq:Si def}
  S_i:=B_i-\frac{1}{3}(B_1+B_2+B_3)
\end{equation}
\index{$\a$Aa@Notation!Expansion normalised!Expressions!Si@$S_i$}%
and, if $\{i,j,k\}=\{1,2,3\}$,
\[
B_i:=N_i(N_i-N_j-N_k);
\]
\index{$\a$Aa@Notation!Expansion normalised!Expressions!Bi@$B_i$}%
cf. (\ref{eq:slmde}). Since
\begin{equation}\label{eq:SigmaiitoSpm}
  6\Sigma_{1}=-4\Sp,\ \ \
  6\Sigma_{2}=2\Sp+2\sqrt{3}\Sm,\ \ \
  6\Sigma_{3}=2\Sp-2\sqrt{3}\Sm,
\end{equation}
(\ref{seq:SigmaiNiprprel}) yields
\begin{subequations}\label{seq:NoSmprime}
  \begin{align}
    N_i'=&f_i(q,\Sp,\Sm)N_i, \label{eq:nip} \\
    \Sigma_\pm'=& -(2-q)\Sigma_\pm-3S_{\pm},  \label{eq:spmp}
  \end{align}
\end{subequations}
where 
\begin{subequations}\label{seq:fidef}
  \begin{align}
    f_{1}(q,\Sp,\Sm) := & q-4\Sp,\label{eq:fodef}\\
    f_{2}(q,\Sp,\Sm) := & q+2\Sp+2\sqrt{3}\Sm,\label{eq:ftdef}\\
    f_{3}(q,\Sp,\Sm) := & q+2\Sp-2\sqrt{3}\Sm\label{eq:fthdef}
  \end{align}
\end{subequations}
and
\begin{subequations}\label{seq:SpSmev}
  \begin{align}    
    S_{+} := & \frac{1}{2}[(\Nt-\Nth)^2-\No(2\No-\Nt-\Nth)],\label{eq:Spdef}\\
    S_{-} := & \frac{\sqrt{3}}{2}(\Nth-\Nt)(\No-\Nt-\Nth). \label{eq:Smdef}
  \end{align}
\end{subequations}
\index{$\a$Aa@Notation!Expansion normalised!Expressions!Spm@$S_\pm$}%
Turning to the evolution equations for $\phi_{0}$ and $\phi_{1}$, see (\ref{eq:phiddot}), they can be written 
\begin{subequations}\label{seq:phizphioev}
  \begin{align}
    \phi_{0}' = & \phi_{1},\label{eq:phizev}\\
    \phi_{1}' = & -(2-q)\phi_{1}-\frac{9V'\circ\phi_{0}}{\theta^{2}}\label{eq:phioneev}
  \end{align}
\end{subequations}
Finally, the Hamiltonian constraint can be written
\begin{equation}\label{eq:constraint}
  \Sp^2+\Sm^2+\Psi+\frac{3}{4}[\No^2+\Nt^2+\Nth^2-2(\No\Nt+\Nt\Nth+\No\Nth)]=1.
\end{equation}
\index{Expansion normalised!Hamiltonian constraint}%
\index{Hamiltonian constraint!Expansion normalised}%
\index{Constraint equations!Hamiltonian!Expansion normalised}%
This relation can be used to deduce that
\begin{equation}\label{eq:altqdef}
  q=2-\frac{3}{2}[\No^2+\Nt^2+\Nth^2-2(\No\Nt+\Nt\Nth+\No\Nth)]-2\Omega.
\end{equation}
It is important to note that if we let $N_{i}$, $\Sigma_{\pm}$, $\phi_{0}$, $\phi_{1}$ and $\theta$ be solutions to (\ref{eq:thetaprime}) and
(\ref{seq:NoSmprime})--(\ref{seq:phizphioev}), where $q$ is given by (\ref{eq:altqdef}), and if we define $f$ by
\[
f:=\Sp^2+\Sm^2+\Psi+\frac{3}{4}[\No^2+\Nt^2+\Nth^2-2(\No\Nt+\Nt\Nth+\No\Nth)]-1,
\]
then
\begin{equation}\label{eq:fdiffeq}
  f'=-2(2-q)f.
\end{equation}
It is also of interest to note that the evolution equation for $\Psi$ can be written
\begin{equation}\label{eq:Psieq}
\Psi'=-2(2-q)\Psi+4\Omega.
\end{equation}
For future reference, it is useful to make the following observation.
\begin{remark}\label{remark:Sigmapm vs Sigmai}
  When analysing the asymptotics of solutions, it is convenient to work with the $\Sigma_\pm$ as opposed to the $\Sigma_i$, since this choice reduces
  the number of variables. However, the natural geometric assumptions and conclusions are phrased in terms of isometry classes of initial
  data, isometry classes of data on the singularity and isometry classes of solutions. This necessitates taking quotients, as in, e.g.,
  Lemma~\ref{lemma:sfR mfT mfs}. Moreover, the associated groups act naturally on solutions when expressed using the variables $\Sigma_i$.
  For this reason, it is useful to keep in mind that (\ref{eq:Sigma sq}) holds. In particular, $q$ introduced in (\ref{eq:qdef}) and the
  Hamiltonian constraint (\ref{eq:constraint}) are invariant
  under the maps $\psi_\sigma^\pm$ introduced in (\ref{eq:psi sigma pm}) (where the $\bn_i$ are replaced by the $N_i$, the $k_i$ are replaced by
  the $\Sigma_i$ and the $\bphi_i$ are replaced by the $\phi_i$). Note also that with this choice, (\ref{seq:SigmaiNiprprel}) replaces
  (\ref{seq:NoSmprime})--(\ref{seq:SpSmev}), so that $\psi_\sigma^\pm$ maps solutions to solutions. 
\end{remark}

We make the following observation concerning the existence interval of solutions to (\ref{eq:thetaprime}) and
(\ref{seq:NoSmprime})--(\ref{eq:constraint}).

\begin{lemma}\label{lemma:globalexistenceresceq}
  Assume that $V$ is a smooth non-negative function. Then the existence interval of solutions to (\ref{eq:thetaprime}) and
  (\ref{seq:NoSmprime})--(\ref{eq:constraint}) which are not of Bianchi type IX is $\rn{}$.
\end{lemma}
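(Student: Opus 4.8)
The plan is to show that a solution of the type in question cannot reach the boundary of phase space in finite $\tau$-time, and then invoke the standard escape lemma for ODEs. The relevant phase space is $\{(\No,\Nt,\Nth,\Sp,\Sm,\phi_0,\phi_1,\theta)\,:\,\theta>0\}$, on which the right-hand sides of (\ref{eq:thetaprime}), (\ref{seq:NoSmprime})--(\ref{seq:phizphioev}) (with $q$ given by (\ref{eq:altqdef})) are smooth; local existence and uniqueness and a maximal interval $(\tau_-,\tau_+)$ are then automatic, and if $\tau_+<\infty$ the solution must leave every compact subset of the phase space as $\tau\uparrow\tau_+$. So it suffices to show all variables stay in a compact set on each bounded $\tau$-interval.

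First I would record the a priori bounds coming from the Hamiltonian constraint. By (\ref{eq:fdiffeq}) the constraint (\ref{eq:constraint}) is propagated, so it holds on the whole existence interval. Since the solution is not of Bianchi type IX, the signs of the $N_i$ are preserved by (\ref{seq:NoSmprime}), and the spatial scalar curvature stays non-positive (cf. the discussion of (\ref{eq:bSitonij}) and Table~\ref{table:bianchiA}); hence the curvature contribution $\tfrac34[\No^2+\Nt^2+\Nth^2-2(\No\Nt+\Nt\Nth+\No\Nth)]$ in (\ref{eq:constraint}) is non-negative. Also $\Psi\geq 0$ and $\Omega\leq\tfrac32\Psi$ because $V\geq 0$. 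Therefore $\Sp^2+\Sm^2\leq 1$, $0\leq\Psi\leq 1$ and $0\leq\Omega\leq\tfrac32$ throughout, and then $q\in[-3,2]$ by (\ref{eq:altqdef}) and $\phi_1^2\leq 6\Psi\leq 6$. In particular $q$, $\Sp$, $\Sm$, and hence the functions $f_1,f_2,f_3$ of (\ref{seq:fidef}), are bounded by absolute constants.

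Next I would rule out finite-time blow-up of each variable using these bounds. The key point is that, by the previous step, (\ref{eq:nip}) is the linear equation $N_i'=f_iN_i$ with $f_i$ bounded by an absolute constant $C_0$, so $|N_i(\tau)|\leq|N_i(\tau_0)|e^{C_0|\tau-\tau_0|}$; thus the $N_i$ stay bounded on every bounded $\tau$-interval. Similarly, $|\phi_1|\leq\sqrt6$ together with (\ref{eq:phizev}) gives $|\phi_0(\tau)|\leq|\phi_0(\tau_0)|+\sqrt6\,|\tau-\tau_0|$; and $\theta'=-(1+q)\theta$ with $|1+q|$ bounded gives $\theta(\tau_0)e^{-C_1|\tau-\tau_0|}\leq\theta(\tau)\leq\theta(\tau_0)e^{C_1|\tau-\tau_0|}$, so $\theta$ stays bounded above and bounded away from $0$ on bounded intervals. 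Consequently, if $\tau_+<\infty$, the whole solution remains in a compact subset of the phase space as $\tau\uparrow\tau_+$, contradicting maximality; hence $\tau_+=+\infty$, and running the same argument backwards gives $\tau_-=-\infty$, i.e. the existence interval is $\rn{}$.

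The main obstacle is more a conceptual pitfall than a real difficulty: the constraint (\ref{eq:constraint}) does not bound $\No,\Nt,\Nth$ individually once two of them share a sign — the quadratic form in the curvature term is indefinite — so in Bianchi types $\mrVIIz$ and $\mrVIII$ one might fear runaway of these variables. The resolution is that $q$ enters (\ref{eq:nip}) only through the constraint-bounded expression in (\ref{eq:altqdef}), so (\ref{eq:nip}) is effectively linear with uniformly bounded coefficients and no finite-time escape can occur. It is worth noting that the hypothesis $\mfT\neq\mrIX$ is genuinely used: for Bianchi type $\mrIX$ the curvature term can be negative, the constraint no longer bounds $\Sp,\Sm,q$, and solutions indeed need not exist globally.
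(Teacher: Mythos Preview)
Your proof is correct and follows essentially the same approach as the paper: use the non-negativity of the curvature polynomial for $\mfT\neq\mrIX$ together with the constraint to bound $\Sp,\Sm,\Psi,\phi_1,q$, then feed these bounds into the linear equations for $N_i$, $\phi_0$, and $\theta$ to preclude finite-time blow-up. Your bound $q\in[-3,2]$ is slightly weaker than the paper's $q\in[-1,2]$ (which follows from $\Psi-\Omega\geq-\tfrac12\Psi\geq-\tfrac12$ via (\ref{eq:qdef})), but this is immaterial for the argument.
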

\begin{proof}
  Since we have excluded Bianchi type IX, the polynomial in the $N_{i}$ appearing on the left hand side of (\ref{eq:constraint}) is non-negative.
  Combining this observation with (\ref{eq:constraint}) yields the conclusion that $\Sp$, $\Sm$ and $\Psi$ are bounded. This means that
  $\phi_{1}$ is bounded, so that $\phi_{0}$ cannot blow up in finite time. From (\ref{seq:PsiPhiphiiqdef}) and (\ref{eq:constraint}),
  it follows that $-1\leq q\leq 2$. Since $-1\leq q\leq 2$
  and the $\Sigma_{\pm}$ are bounded, (\ref{eq:nip}) implies that the $N_{i}$ cannot blow up in finite time. Finally,
  (\ref{eq:thetaprime}) implies that $\theta$ can neither blow up nor go to zero in finite time. In short, the solution cannot blow up in finite
  time, and we obtain global existence. 
\end{proof}

\chapter{Asymptotic behaviour in the direction of the singularity}\label{section:asymptoticbehavsing}

Our next goal is to analyse the asymptotic behaviour of solutions in the direction of the singularity. We begin by introducing appropriate notation
and characterising the solutions that converge to a special point. 

\section{Characterising convergence to a special point}\label{ssection:charconvtosppoint}
We use the following terminology in the formulation of the results. 
\begin{definition}
  The unit circle in the $\Sp\Sm$-plane is referred to as the \textit{Kasner circle}.
  \index{Kasner circle}%
  On the Kasner circle, the points $(-1,0)$, $(1/2,\sqrt{3}/2)$
  and $(1/2,-\sqrt{3}/2)$ are referred to as the \textit{special (or Taub) points}.
  \index{Special points}%
  \index{Taub points}%
  In terms of the variables $\Sigma_{i}$ and $N_{i}$ introduced in
  connection with (\ref{eq:SigmaijNij}), the subspaces of locally rotationally symmetric (LRS) solutions are defined by $N_{i}=N_{j}$,
  $\Sigma_{i}=\Sigma_{j}$ and $(N_k,\Sigma_k)\neq (N_i,\Sigma_i)$ for some $i,j,k$ satisfying $\{i,j,k\}=\{1,2,3\}$; i.e., there are three
  such subspaces. 
\end{definition}
\begin{remark}
  The definition of local rotational symmetry coincides with the one for initial data introduced in Definition~\ref{def:BTs GTs} and
  corresponds to the one introduced in Definition~\ref{def:LRS}; see Lemma~\ref{lemma:sfB char per symm VIIz} and the definition of the
  expansion normalised variables.
\end{remark}
\begin{remark}
  Up to a rotation in the $\Sp\Sm$-plane and a suitable corresponding permutation of the $N_{i}$ (or, alternatively, the application of a
  $\psi_\sigma^+\in \Gamma^{+,\roev}$ when using the $\Sigma_i$, $N_i$ variables), the LRS subspaces are given by the conditions that
  $\Sm=0$ and $\Nt=\Nth$. Note also that the special points belong to the intersection of the LRS subspaces and the Kasner circle. 
\end{remark}

In the derivation of the asymptotics, it will be of interest to keep the following observation in mind. The main ideas of the proof go back to
\cite[pp.~721--722]{cbu}.

\begin{prop}\label{prop:limitcharsp}
  Let $0\leq V\in\mfP_{\b_V}^{0}$ for some $0<\b_V\in\ro$ and assume that either the conditions of Lemma~\ref{lemma:tautimecoord} or the conditions of
  Lemma~\ref{lemma:tautimecoord BIX} are satisfied. Consider the corresponding solution to (\ref{eq:thetaprime}) and
  (\ref{seq:NoSmprime})--(\ref{eq:constraint}). Then the
  existence interval includes $(-\infty,0]$. If, in addition, $[\Sp(\tau),\Sm(\tau)]\rightarrow (-1,0)$ as $\tau\rightarrow-\infty$, then the solution
  is contained in the invariant set defined by $\Sm=0$ and $\Nt=\Nth$. 
\end{prop}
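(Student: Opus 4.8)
The plan is to exploit the structure of the evolution equations near the special point $(-1,0)$, where the functions $f_2$ and $f_3$ governing the growth of $N_2$ and $N_3$ become negative. First I would record that, by Lemma~\ref{lemma:globalexistenceresceq} together with Lemmas~\ref{lemma:tautimecoord} and \ref{lemma:tautimecoord BIX}, the solution exists on an interval containing $(-\infty,0]$; in the Bianchi type IX case one uses the fact that $[\Sp,\Sm]\to(-1,0)$ forces the relevant $N_i$-products to stay bounded as $\tau\to-\infty$, so that the type IX solution does not leave the region where the analysis is valid. This disposes of the existence statement. The remaining and substantive claim is that the whole solution lies in the invariant set $\{\Sm=0,\ \Nt=\Nth\}$.

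The key observation is the following. From (\ref{eq:constraint}) and (\ref{eq:qdef})--(\ref{eq:altqdef}), as $\tau\to-\infty$ we have $q\to 2$, $\Psi\to 0$, $\Omega\to 0$, and each $N_i\to 0$. Consider $f_2(q,\Sp,\Sm)=q+2\Sp+2\sqrt 3\Sm$ and $f_3(q,\Sp,\Sm)=q+2\Sp-2\sqrt3\Sm$ from (\ref{seq:fidef}). At the limit point, $q+2\Sp=2+2(-1)=0$ and $\Sm=0$, so both $f_2$ and $f_3$ tend to $0$; hence I cannot conclude directly that $N_2,N_3$ decay. Instead I would analyse the combination $N_2 N_3$ (or, depending on Bianchi type, whichever product of the $N_i$ is nontrivial) and the quantity $\Sm$. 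Using (\ref{eq:nip}), $(\ln(N_2 N_3))' = f_2 + f_3 = 2q + 4\Sp$, and from (\ref{eq:spmp}) together with (\ref{eq:Smdef}), $\Sm' = -(2-q)\Sm - 3S_-$ where $S_- = \tfrac{\sqrt3}{2}(\Nth-\Nt)(\No-\Nt-\Nth)$. The idea, following \cite[pp.~721--722]{cbu}, is to build a nonnegative quantity measuring the deviation from the invariant set — a suitable multiple of $N_2^2 + N_3^2$ (and $N_1^2$, depending on type) plus a correction term involving $\Sm$ — show it satisfies a differential inequality of the form $(\text{dev})' \geq c\,(\text{dev})$ for some $c>0$ on $(-\infty, T]$ once $\tau$ is sufficiently negative (because $2q+4\Sp \to 4(2) + 4(-1) \cdot$, wait — one must check the sign carefully: $2q+4\Sp$ at the limit is $4-4=0$, so the growth rate is marginal and the argument is more delicate than a crude exponential estimate).

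Because the leading rates are marginal, the real work — and the step I expect to be the main obstacle — is extracting the correct subleading behaviour: one must use the Hamiltonian constraint (\ref{eq:constraint}) to trade $\Sp^2+\Sm^2$ against $\Psi$ and the $N_i$-polynomial, and the evolution equation (\ref{eq:Psieq}) for $\Psi$, to show that the deviation quantity cannot be merely nonzero and bounded but must actually vanish identically. Concretely, if $\Sm$ or some $N_i$ in the ``symmetry-breaking'' direction were not identically zero, then by uniqueness for the ODE system it would be nonzero on all of $(-\infty,0]$; one then derives that $N_2/N_3$ (say) would have to converge, and a monotonicity or integral estimate built from $(\ln(N_2/N_3))' = f_2 - f_3 = 4\sqrt3\,\Sm$ combined with $\Sm \to 0$ and the constraint-derived decay of the $N_i$ forces a contradiction with $[\Sp,\Sm]\to(-1,0)$ — essentially, approaching the Taub point along a non-LRS orbit is incompatible with the constraint because the $N_i$ would have to decay too slowly. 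I would organise this as: (i) global existence and $q\to2$, $N_i\to0$; (ii) set up the deviation functional $D := a(N_2-N_3)^2 + (\text{lower order in } \Sm) + (\text{type-dependent }N_1\text{ term})$; (iii) compute $D'$ and show $D' \le -\delta D$ eventually, using the constraint to control error terms; (iv) conclude $D\equiv 0$ from $D\to0$ as $\tau\to-\infty$ combined with this inequality (which forces $D$ to be nonincreasing toward $-\infty$, hence identically zero given it is nonnegative and has limit zero); (v) translate $D\equiv0$ back to $\Sm\equiv0$, $\Nt\equiv\Nth$, noting this set is invariant under (\ref{seq:SigmaiNiprprel}). The sign bookkeeping in step (iii), where all the naive rates vanish at the limit point and one must descend to the next order using (\ref{eq:Psieq}) and (\ref{eq:constraint}), is where the argument genuinely has to be earned.
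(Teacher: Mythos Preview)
Your overall framework is right --- a deviation functional plus a differential inequality --- and the paper uses exactly $\mff := \tfrac{4}{3}\Sm^2 + (\Nt-\Nth)^2$. But step (iii) as you state it cannot be carried out: the inequality $D' \leq -\delta D$ with $\delta>0$ is not obtainable. Computing $\mff'$ gives
\[
\mff' = -\tfrac{8}{3}(2-q)\Sm^2 + 2(q+2\Sp)(\Nt-\Nth)^2 + 4\sqrt{3}\No(\Nt-\Nth)\Sm,
\]
and all three coefficients $2-q$, $q+2\Sp$, $\No$ tend to zero at the special point. You cannot descend to next order and extract a definite negative sign; the best you get is $\mff' \leq \epsilon\mff$ for any $\epsilon>0$ once $\tau$ is sufficiently negative. (You also oscillate between $D' \geq cD$ earlier and $D' \leq -\delta D$ later --- neither holds.)

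The paper closes the gap differently: the weak inequality $\mff' \leq \epsilon\mff$ gives $\mff(T) \leq \mff(\tau)e^{\epsilon(T-\tau)}$ for $\tau\leq T$, and then a \emph{separate} analysis of the $\Sp'$ equation (see (\ref{eq:Sp prime LRS char})) shows $|\Sp(\tau)+1| \leq C_\epsilon e^{(6-\epsilon)\tau}$, which via the constraint forces $\mff(\tau) \leq C_\epsilon e^{(6-\epsilon)\tau}$. Combining the two,
\[
\mff(T) \leq C_\epsilon e^{(6-\epsilon)\tau}e^{\epsilon(T-\tau)} \to 0 \quad\text{as } \tau\to -\infty,
\]
so $\mff(T)=0$. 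The missing ingredient in your plan is this rate estimate for $\Sp+1$: it is what converts the merely marginal ODE behaviour of $\mff$ into an actual vanishing argument. Note also that the $\Sp'$ step requires care in type IX, where $\Sp+1$ can be negative and one must use the constraint to bound $|\Sp+1|$ by $\No(\Nt+\Nth)$ in that regime.
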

\begin{remark}\label{remark:limitcharsp}
  By the symmetries of the equations, there are similar statements in case $(\Sp,\Sm)$ converges to one of the other special points. Moreover, in
  case $\mfT\neq\mrIX$, the assumption that $\Sm(\tau)\rightarrow 0$ as $\tau\rightarrow -\infty$ can be omitted. 
\end{remark}
\begin{remark}
  Due to this result, if one of the eigenvalues of the expansion normalised Weingarten map converges to $1$ and the other two converge to $0$ in the
  direction of the singularity, then the solution has an additional symmetry; it is LRS. This is the reason we impose the final condition in
  Definition~\ref{def:ndvacidonbbssh}.
\end{remark}
\begin{proof}
  Due to the fact that the conditions of Lemma~\ref{lemma:tautimecoord} or Lemma~\ref{lemma:tautimecoord BIX} are satisfied, the existence interval
  includes $(-\infty,0]$. If $\mfT=\mrIX$ and the conditions of Lemma~\ref{lemma:BianchiAdevelopment} are satisfied, then $X(\tau)>0$ for
  $\tau\leq \tau_0\leq 0$ and some $\tau_0$. This means that $3V\circ\phi_0/\theta^2\leq 1$ for all $\tau\leq \tau_0$, see (\ref{eq:Xdef}), so that
  \[
  \Psi-\Omega\geq-\tfrac{3}{2}\tfrac{V\circ\phi_0}{\theta^2}\geq-\tfrac{1}{2}
  \]
  for all $\tau\leq \tau_0$. Thus $q$ introduced in (\ref{eq:qdef}) satisfies $q\geq 2\Sp^2-1$. Since $\Sp$ tends to $-1$, this means that $q$ is bounded
  from below by $1$ in the limit. Next, assume that the conditions of Lemma~\ref{lemma:Bianchi IX remainder} are satisfied. Then $V\circ\phi_0/\theta^2$
  converges to zero in the direction of the singularity. This means that $\Omega\rightarrow 0$. Combining this observation with the definition of
  $q$ and the fact that $\Sp\rightarrow -1$, it follows that $q$ is bounded from below by $2$ in the limit. Combining these observations with
  (\ref{eq:nip}) yields the conclusion that in the Bianchi type IX setting, $\No$, $\No\Nt$ and $\No\Nth$ converge to zero exponentially in
  the direction of the singularity. In particular, in the case of Bianchi type IX, $-\No(\Nt+\Nth)$ converges to zero. On the other hand,
  (\ref{eq:constraint}) can be written
  \begin{equation}\label{eq:constraint NoNthNth}
    \Sp^2+\Sm^2+\Psi+\frac{3}{4}[\No^2+(\Nt-\Nth)^2]-\frac{3}{2}\No(\Nt+\Nth)=1.
  \end{equation}
  Due to the fact that the last term on the left hand side of (\ref{eq:constraint NoNthNth}) converges to zero and $\Sp$ converges to $-1$, it is
  clear that $\Psi\rightarrow 0$ (so that $\phi_1\rightarrow 0$ and $\Omega\rightarrow 0$ due to (\ref{eq:PsiOmegadef}) and the fact that $V$ is
  non-negative), and that the polynomial in the $N_{i}$ appearing on the left hand side of (\ref{eq:constraint}) converges to zero. When
  $\mfT\neq\mrIX$, we obtain the same conclusions, since the fourth term on the left hand side of (\ref{eq:constraint}) is non-negative. In fact,
  in this case, we can also deduce that $\Sm\rightarrow 0$ from the assumption that $\Sp\rightarrow -1$. Thus,
  irrespective of whether $\mfT=\mrIX$ or $\mfT\neq\mrIX$, the above observations, combined with (\ref{eq:altqdef}), yield the conclusion that $q$
  converges to $2$. Appealing to (\ref{eq:nip}), it furthermore follows that for every $\e>0$, there is a constant $C_{\e}$ such that
  \begin{equation}\label{eq:No etc exp decay}
    |\No(\tau)|+|(\No\Nt)(\tau)|+|(\No\Nth)(\tau)|\leq C_{\e}e^{(6-\e)\tau}
  \end{equation}
  for $\tau\leq 0$. Similarly, due to (\ref{eq:thetaprime}), $1/\theta(\tau)\leq C_{\e}e^{(3-\e)\tau}$ for all $\tau\leq 0$. Combining this estimate
  with the fact that $\phi'$ converges to zero and $V\in\mfP_{\b_V}^{0}$, it follows that $|\Omega(\tau)|\leq C_{\e}e^{(6-\e)\tau}$ for all $\tau\leq 0$. 

  Next, define
  \begin{equation}\label{eq:mff def}
    \mff:=\frac{4}{3}\Sm^{2}+(\Nt-\Nth)^{2}.
  \end{equation}
  Since the set defined by $\Sm=0$ and $\Nt=\Nth$ is invariant, $\mff$ is either identically zero or always strictly positive. It can also be computed
  that
  \begin{equation}\label{eq:mffpr}
    \mff'=-\frac{8}{3}(2-q)\Sm^{2}+2(q+2\Sp)(\Nt-\Nth)^{2}+4\sqrt{3}\No (\Nt-\Nth)\Sm.
  \end{equation}
  On the other hand, by the above observations, $2-q$, $q+2\Sp$ and $\No$ converge to zero in the direction of the singularity. It follows that for
  every $\e>0$, there is a $T$ such that $\mff'(\tau)\leq \e \mff(\tau)$ for all $\tau\leq T$. This means that
  \begin{equation}\label{eq:mffTest}
    \mff(T)\leq \mff(\tau)\exp[\e(T-\tau)]
  \end{equation}
  for all $\tau\leq T$. Next, (\ref{eq:spmp}) and (\ref{eq:altqdef}) yield
  \begin{equation}\label{eq:Sp prime LRS char}
      \Sp' = -\frac{3}{2}(\Nt-\Nth)^{2}(\Sp+1)+\frac{3}{2}\No^{2}(2-\Sp)-\frac{3}{2}\No(\Nt+\Nth)(1-2\Sp)-2\Omega\Sp.
  \end{equation}
  In case $\mfT\neq\mrIX$, then $\Sp+1\geq 0$, and the first term is negative. In case $\mfT=\mrIX$ and $\Sp+1\leq 0$, then
  (\ref{eq:constraint NoNthNth}) implies that
  \[
  -(\Sp+1)(-\Sp+1)+\Sm^2+\Psi+\frac{3}{4}[\No^2+(\Nt-\Nth)^2]=\frac{3}{2}\No(\Nt+\Nth).
  \]
  This means that
  \[
  |\Sp+1|\leq \frac{3}{4}\No(\Nt+\Nth)
  \]
  when $\Sp+1\leq 0$. Combining this observation with (\ref{eq:No etc exp decay}) yields the conclusion that $|\Sp(\tau)+1|\leq C_\e e^{(6-\e)\tau}$ for
  all $\tau\leq 0$, assuming $\Sp(\tau)+1\leq 0$. Combining this observation with (\ref{eq:No etc exp decay}), (\ref{eq:Sp prime LRS char}) and the
  fact that $|\Omega(\tau)|\leq C_{\e}e^{(6-\e)\tau}$ for all $\tau\leq 0$, it follows that, irrespective of whether $\Sp+1\leq 0$ or $\Sp+1\geq 0$,
  \[
  \Sp'\leq C_\e e^{(6-\e)\tau}
  \]
  for all $\tau\leq 0$. Integrating this estimate from $-\infty$ to $\tau$ yields $\Sp(\tau)+1\leq C_\e e^{(6-\e)\tau}$. Since the negative part of
  $\Sp+1$ satisfies a similar bound, it follows that
  \[
  |\Sp(\tau)+1|\leq C_{\e}e^{(6-\e)\tau}
  \]
  for all $\tau\leq 0$. Next, note that the constraint (\ref{eq:constraint NoNthNth}) yields
  \[
  \frac{3}{4}\mff\leq \Sm^{2}+\frac{3}{4}(\Nt-\Nth)^{2}+\Psi+\frac{3}{4}\No^{2}=\frac{3}{2}\No(\Nt+\Nth)+(1-\Sp)(1+\Sp)\leq C_{\e}e^{(6-\e)\tau}.
  \]
  Combining this estimate with (\ref{eq:mffTest}) yields the conclusion that
  \[
  \mff(T)\leq C_{\e}\exp[(6-\e)\tau+\e(T-\tau)]
  \]
  for all $\tau\leq T$. Choosing $\e$ small enough and letting $\tau\rightarrow-\infty$ yields the conclusion that the right hand side converges to
  zero. Thus $\mff(T)=0$ and the statement of the proposition follows.
\end{proof}
In what follows, it will also be convenient to have the following asymptotic characterisation of LRS Bianchi type $\mrVIIz$ solutions. 
\begin{prop}\label{prop:LRS BVIIz as char}
  Let $0\leq V\in\mfP_{\a_V}^{0}$ for some $\a_V\in (0,1)$. Consider a solution to (\ref{eq:thetaprime}) and
  (\ref{seq:NoSmprime})--(\ref{eq:constraint}) with $\No=0$ and $\Nt\Nth>0$. Assume that
  \begin{equation}\label{eq:limit assumptions}
    \lim_{\tau\rightarrow-\infty}(\Nt/\Nth,\Sp,\phi')(\tau)=(1,\sigma_+,\Phi_1), 
  \end{equation}
  where $\sigma_+^2+\Phi_1^2/6=1$ and $\sigma_+>-1$. Then the solution is contained in the invariant set defined by $\Sm=0$ and $\Nt=\Nth$;
  i.e., the solution is either LRS or isotropic. 
\end{prop}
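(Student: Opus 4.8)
The plan is to imitate the argument used in Proposition~\ref{prop:limitcharsp}, the main difference being that here we are in the Bianchi type $\mrVIIz$ setting (so $\No=0$, $\Nt\Nth>0$) and we are converging to a general point of the LRS subspace rather than to a special (Taub) point. First I would record the consequences of the limit assumptions~(\ref{eq:limit assumptions}). Since $\No=0$, the Hamiltonian constraint~(\ref{eq:constraint}) becomes
\[
\Sp^2+\Sm^2+\Psi+\tfrac{3}{4}(\Nt-\Nth)^2=1.
\]
From $\Nt/\Nth\to 1$ and $\Sp\to\sigma_+$, $\phi'\to\Phi_1$ with $\sigma_+^2+\Phi_1^2/6=1$, I would deduce $\Sm\to 0$, $\Psi\to\Phi_1^2/6$, hence $\Omega\to 0$ (using $V\geq 0$ and $\Psi-\Omega=\phi_1^2/6$), and $\tfrac{3}{4}(\Nt-\Nth)^2\to 0$. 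Feeding these into~(\ref{eq:altqdef}) gives $q\to 2-2\Omega\to 2$. Moreover $q+2\Sp\to 2+2\sigma_+>0$ by the assumption $\sigma_+>-1$; this strict positivity is the crucial structural feature that replaces, in the present argument, the exponential decay of $\No$ that was available at the Taub point.

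Next I would set up the monotone quantity. Since $\No=0$, the function from~(\ref{eq:mff def}),
\[
\mff:=\tfrac{4}{3}\Sm^2+(\Nt-\Nth)^2,
\]
has, by~(\ref{eq:mffpr}) with $\No=0$, the derivative
\[
\mff'=-\tfrac{8}{3}(2-q)\Sm^2+2(q+2\Sp)(\Nt-\Nth)^2.
\]
The first term is harmless because $2-q\to 0$. The second term is the obstacle: $q+2\Sp$ does not tend to zero but to $2+2\sigma_+>0$, so naively $\mff'$ could be positive and comparable to $\mff$, which is not enough to conclude $\mff\equiv 0$. To get around this I would instead estimate the two pieces of $\mff$ at different exponential rates. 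Using~(\ref{eq:spmp}), (\ref{eq:altqdef}), the constraint, and the fact that $2-q=2\Omega+\tfrac{3}{2}(\Nt-\Nth)^2$ (when $\No=0$), together with the decay of $\theta^{-1}$ coming from~(\ref{eq:thetaprime}) (since $q\to 2$, for every $\e>0$ one has $1/\theta(\tau)\le C_\e e^{(3-\e)\tau}$, hence $\Omega(\tau)\le C_\e e^{(6-\e)\tau}$ because $V\in\mfP_{\a_V}^0$ and $\phi'$ is bounded), I would show that $(\Nt-\Nth)^2$ decays like $\Psi$ does — more precisely that the combination $\Psi+\tfrac{3}{4}(\Nt-\Nth)^2=1-\Sp^2-\Sm^2$ is controlled, and that $\Sm$ and $\Nt-\Nth$ separately decay faster than any fixed exponential is too strong to expect; what one actually wants is a Gronwall estimate showing $\Sm^2\lesssim (\Nt-\Nth)^2$ up to lower-order terms. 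Concretely, from~(\ref{eq:spmp}) the equation for $\Sm$ reads $\Sm'=-(2-q)\Sm-3S_-$ with $S_-=\tfrac{\sqrt3}{2}(\Nth-\Nt)(\No-\Nt-\Nth)=-\tfrac{\sqrt3}{2}(\Nth-\Nt)(\Nt+\Nth)$, so $|S_-|\lesssim |\Nt-\Nth|$; thus $\Sm$ is driven by $\Nt-\Nth$, and since $2-q\to 0$ a Duhamel/Gronwall argument gives $|\Sm(\tau)|\le C\sup_{s\le\tau}|\Nt-\Nth|(s)+(\text{exponentially small})$. Substituting this back into $\mff'$ makes the $-\tfrac{8}{3}(2-q)\Sm^2$ term negligible and shows $\mff'\le C(q+2\Sp)(\Nt-\Nth)^2+(\text{exp. small})\le C'\mff+(\text{exp. small})$ — which is still only linear.

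The genuine fix, and the step I expect to be the main obstacle, is therefore to exhibit a \emph{better} monotone functional than $\mff$ itself, exploiting that the whole solution is being forced toward the invariant set $\{\Sm=0,\ \Nt=\Nth\}$ which, at the limit point, is \emph{normally hyperbolic} with the normal directions contracting in the direction of the singularity (the linearisation of~(\ref{seq:NoSmprime}) transverse to the LRS subspace at $(\sigma_+,0,\Nt=\Nth)$ has eigenvalues with the right sign precisely because $\sigma_+>-1$ and $q\to2$). I would make this rigorous by the same device as in Proposition~\ref{prop:limitcharsp}: first show, by integrating the $\Sp$-equation~(\ref{eq:Sp prime LRS char}) (with $\No=0$ it simplifies to $\Sp'=-\tfrac{3}{2}(\Nt-\Nth)^2(\Sp+1)-2\Omega\Sp$) that $\Sp+1$ — and hence, via the constraint, the quantity $(1-\Sp)(1+\Sp)=\Sm^2+\Psi+\tfrac34(\Nt-\Nth)^2$ up to the exponentially small $\Omega$-contribution — decays at least like $e^{(c-\e)\tau}$ for a definite $c>0$ determined by $2(1+\sigma_+)$; then feed this into the analogue of~(\ref{eq:mffTest}) to conclude $\mff(T)\le C_\e\exp[(c-\e)\tau+\e(T-\tau)]$ for all $\tau\le T$, and let $\tau\to-\infty$ with $\e$ small to force $\mff(T)=0$. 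Once $\mff\equiv 0$ we have $\Sm\equiv 0$ and $\Nt\equiv\Nth$, i.e. the solution lies in the invariant LRS set, so it is LRS or isotropic, as claimed. The delicate point throughout is keeping track of the exponential rates — ensuring that the decay rate extracted from the $\Sp$-equation genuinely beats the (now merely bounded, not decaying) coefficient $q+2\Sp$ in $\mff'$ — and this is exactly where the hypothesis $\sigma_+>-1$ (equivalently, that we are not at a Taub point degeneracy) is used.
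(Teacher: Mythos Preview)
Your proposal has a genuine gap in the final paragraph. You claim that integrating the $\Sp$-equation will show that $\Sp+1$ decays like $e^{(c-\e)\tau}$. But this is false: by hypothesis $\Sp\to\sigma_+$ with $\sigma_+>-1$, so $\Sp+1\to 1+\sigma_+>0$ and does \emph{not} decay at all. You have carried over the structure of the Taub-point argument (Proposition~\ref{prop:limitcharsp}) too literally; there the whole point was $\sigma_+=-1$, which is exactly what is excluded here.

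The missing observation is that in the present $\mrVIIz$ setting the individual $N_i$ themselves decay exponentially. Since $N_i'=f_i(q,\Sp,\Sm)N_i$ and $f_2,f_3\to 2(1+\sigma_+)>0$, both $\Nt$ and $\Nth$ decay at rate $2(1+\sigma_+)$. This, together with the exponential decay of $\Omega$ (from $q\to2$, $V\in\mfP_{\a_V}^0$, $|\phi'|\le\sqrt6$), gives exponential decay of $q-2$, $\Sp-\sigma_+$, and $\Sm$ at some rate $2\e$. Now the key trick: the ratio satisfies $(\Nt/\Nth)'=4\sqrt3\,\Sm\,(\Nt/\Nth)$, so $\Nt/\Nth-1=O(e^{2\e\tau})$, and hence
\[
(\Nt-\Nth)^2=\Nth^2\,(\Nt/\Nth-1)^2\le C\exp\bigl[4(1+\sigma_+)\tau+4\e\tau\bigr].
\]
Feeding this into the $\Sm$-equation (via $S_-$) gives the same bound for $\Sm^2$, hence for $\mff$. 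On the other hand $\mff'\le 2(q+2\Sp)\mff$ integrates to $\mff(T)\le C\exp[4(1+\sigma_+)(T-\tau)]\mff(\tau)$. Combining the two,
\[
\mff(T)\le C\exp[4(1+\sigma_+)T]\,e^{4\e\tau},
\]
and letting $\tau\to-\infty$ forces $\mff(T)=0$. The point is that the backward growth rate of $\mff$ is exactly $4(1+\sigma_+)$, while $(\Nt-\Nth)^2$ decays at the \emph{strictly faster} rate $4(1+\sigma_+)+4\e$; the extra $4\e$ comes from the ratio, not from $\Sp+1$.
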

\begin{proof}
  Combining the assumptions of the proposition with (\ref{eq:constraint}) yields the conclusion that $\Sm$, $\Omega$ and $\Nt-\Nth$ converge to zero as
  $\tau\rightarrow-\infty$. Due to these conclusions and (\ref{eq:qdef}), it follows that $q\rightarrow 2$ as
  $\tau\rightarrow -\infty$. Combining these observations with the assumptions and the equations, it is clear that $\Nt$ and $\Nth$ converge to zero
  exponentially. By the assumptions concerning $V$; the fact that $q$ converges to $2$; the fact that $|\phi'|\leq\sqrt{6}$ and (\ref{eq:thetaprime}),
  it is also clear that $\Omega$ decays exponentially. This means that $q-2$ converges to zero exponentially. Since $q-2$ and the $N_i$ converge to zero
  exponentially, (\ref{eq:spmp}) implies that $\Sp-\sigma_+$ and $\Sm$ decay exponentially. Returning to the equations with this information at hand yields
  \begin{equation}\label{eq:rate of convergence LRS VIIz}
    |N_i(\tau)|\leq C e^{2(1+\sigma_+)\tau},\ \ \
    |q(\tau)-2|+|\Sp(\tau)-\sigma_+|+|\Sm(\tau)|\leq Ce^{2\e\tau}
  \end{equation}
  for $i=2,3$ and $\tau\leq 0$, where $\e:=\min\{2(1+\sigma_+),3(1-\a_V)\}$. Next, note that
  \[
  \left(\tfrac{\Nt}{\Nth}\right)'=4\sqrt{3}\Sm\tfrac{\Nt}{\Nth}.
  \]
  Combining this observation with (\ref{eq:limit assumptions}) and (\ref{eq:rate of convergence LRS VIIz}) yields $\Nt(\tau)/\Nth(\tau)-1=O(e^{2\e\tau})$,
  so that
  \begin{equation}\label{eq:Delta N est}
    [\Nt(\tau)-\Nth(\tau)]^2=\Nth^2(\tau)[\Nt(\tau)/\Nth(\tau)-1]^2\leq C\exp[4(1+\sigma_+)\tau+4\e\tau]
  \end{equation}
  for all $\tau\leq 0$. Next, let 
  \[
  h(\tau):=\textstyle{\int}_{-\infty}^{\tau}[2-q(s)]ds.
  \]
  Then
  \[
  |(e^{h}\Sm)'(\tau)|=3e^{h(\tau)}|S_-(\tau)|\leq C\exp[4(1+\sigma_+)\tau+2\e\tau]
  \]
  for all $\tau\leq 0$. Integrating this estimate yields the conclusion that
  \begin{equation}\label{eq:Sm dec est}
    |\Sm(\tau)|\leq C\exp[4(1+\sigma_+)\tau+2\e\tau]
  \end{equation}
  for all $\tau\leq 0$. Defining $\mff$ as in (\ref{eq:mff def}), the estimates (\ref{eq:Delta N est}) and (\ref{eq:Sm dec est}) imply that
  \begin{equation}\label{eq:mff improved decay VIIz}
    \mff(\tau)\leq C\exp[4(1+\sigma_+)\tau+4\e\tau]
  \end{equation}
  for all $\tau\leq 0$. On the other hand, (\ref{eq:mffpr}) and the fact that $2-q\geq 0$ imply that $\mff'\leq 2(q+2\Sp)\mff$. Integrating
  this estimate from $\tau$ to some fixed $T\geq \tau$ and using the above observations yields
  \[
  \mff(T)\leq C\exp[4(1+\sigma_+)(T-\tau)]\mff(\tau)
  \]
  for all $\tau\leq T$. Combining this estimate with (\ref{eq:mff improved decay VIIz}) yields
  \[
  \mff(T)\leq C\exp[4(1+\sigma_+)T]e^{4\e\tau}
  \]
  for all $\tau\leq T$. Letting $\tau$ tend to $-\infty$ yields the conclusion that $\mff(T)=0$, so that the solution is contained in the invariant
  set defined by $\Sm=0$ and $\Nt=\Nth$. 
\end{proof}

\section{Preliminary observations concerning the asymptotics}\label{ssection:prelobsconas}

As a first step in the derivation of the asymptotics, we consider anisotropic Bianchi type I and non-LRS Bianchi type II solutions. 

\begin{lemma}\label{lemma:BIBIIasympt}
  Consider an anisotropic development satisfying the conditions of Lemma~\ref{lemma:BianchiAdevelopment}. Assume it to be either
  Bianchi type I or non-LRS Bianchi type II. In case it is
  of Bianchi type II, assume $\No\neq 0$. Then $(\Sp,\Sm)$ converges to a limit $(\sigma_{+},\sigma_{-})$ as $\tau\rightarrow-\infty$. In case the
  solution is of non-LRS Bianchi type I, $(\sigma_{+},\sigma_{-})$ does not belong to any of the line segments connecting a special point to its
  antipodal point. In the case of Bianchi type II, $\sigma_{-}\neq 0$ and $\sigma_{+}\leq 1/2$. Moreover, $\Psi$ converges to a limit $\Psi_{\infty}$
  and there is a constant $\theta_{\infty}>0$ such that $e^{3\tau}\theta(\tau)$ converges to $\theta_{\infty}$ as $\tau\rightarrow-\infty$. In the case
  of Bianchi type II, $\No(\tau)\rightarrow 0$ as $\tau\rightarrow -\infty$. 

  Assume, in addition to the above, that $V\in \mfP_{\a_V}^1$ for some $\a_V\in (0,1)$. Then there are constants $\a>0$, $\theta_{\infty}>0$, $\Phi_{1}$
  and $\Phi_{0}$ such that
  \begin{subequations}\label{seq:asBI}
    \begin{align}
      |\phi_{1}(\tau)-\Phi_{1}|+|\phi(\tau)-\Phi_{1}\tau-\Phi_{0}| \leq & Ce^{\a\tau},\label{eq:BIphias}\\
      |\Sp(\tau)-\sigma_{+}|+|\Sm(\tau)-\sigma_{-}| \leq & Ce^{\a\tau},\label{eq:BISpmas}\\
      |q(\tau)-2|+|\ln\theta(\tau)+3\tau-\ln\theta_{\infty}| \leq & Ce^{\a\tau}\label{eq:BIlnthetaas}      
    \end{align}
  \end{subequations}  
  for all $\tau\leq 0$. In the case of Bianchi type II, $(\sigma_{+},\sigma_{-})$ satisfies $\sigma_{+}<1/2$ and $\sigma_{-}\neq 0$ and there is,
  in addition to the above constants, an $m_{1}$ such that the following estimate also holds for all $\tau\leq 0$:
  \begin{equation}\label{eq:BIINzas}
    |\ln|\No(\tau)|-(2-4\sigma_{+})\tau-m_{1}| \leq Ce^{\a\tau}.
  \end{equation}
  In both cases, 
  \begin{equation}\label{eq:asymptotic hamiltonian constraint}
    \frac{1}{6}\Phi_{1}^{2}+\sigma_{+}^{2}+\sigma_{-}^{2}=1.
  \end{equation}
  Finally, in the case of Bianchi type I, $\a$ can be chosen to equal $6-\sqrt{6}\a_{V}|\Phi_{1}|$, and in the case of Bianchi type II,
  it can be chosen to equal $\min\{4-8\sigma_{+},6-\sqrt{6}\a_{V}|\Phi_{1}|\}$. 
\end{lemma}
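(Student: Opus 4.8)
The plan is to analyse the Wainwright--Hsu system \eqref{eq:thetaprime} and \eqref{seq:NoSmprime}--\eqref{eq:constraint} in the Bianchi type I and non-LRS Bianchi type II cases, obtaining first convergence and then quantitative exponential decay rates. First I would treat Bianchi type I. Here all the $N_i$ vanish, so $S_\pm = 0$, and \eqref{eq:spmp} becomes $\Sigma_\pm' = -(2-q)\Sigma_\pm$ with $q$ given by \eqref{eq:altqdef} reducing to $q = 2 - 2\Omega$, where $\Omega = \tfrac{9}{2}V\circ\phi_0/\theta^2 \ge 0$. The Hamiltonian constraint \eqref{eq:constraint} reads $\Sp^2+\Sm^2+\Psi = 1$ with $\Psi = \tfrac16\phi_1^2 + \Omega$, so $\Sp,\Sm,\phi_1$ are all bounded and $\Omega \le \Psi \le 1$. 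Since $2-q = 2\Omega \ge 0$, both $\Sp^2$ and $\Sm^2$ are monotone (nonincreasing) along the flow, hence converge; combined with the constraint this forces $\Psi$ to converge as well, and since $\Omega\ge 0$ and $\phi_1^2\ge 0$ both converge individually. To see that $(\Sp,\Sm)$ itself converges (not merely $\Sp^2,\Sm^2$), note that if $\sigma_+^2 := \lim \Sp^2 > 0$ then $\Sp$ cannot change sign near $-\infty$ (it would have to pass through $0$, contradicting monotonicity of $\Sp^2$ toward a positive limit), and similarly for $\Sm$; the degenerate cases where a limit is $0$ are handled directly. Then I would show $\Omega(\tau)\to 0$ as $\tau\to-\infty$: from \eqref{eq:thetaprime}, $\theta' = -(1+q)\theta$ with $1+q = 3-2\Omega \in (1,3]$ bounded away from $0$, so $1/\theta(\tau) \le C e^{3\tau}$ roughly; combined with $|\phi_1| \le \sqrt 6$ (so $\phi_0$ grows at most linearly) and $V\in\mfP_{\a_V}^1$, i.e.\ $V$ grows at most like $e^{\sqrt 6 \a_V|\phi_0|}$, one gets $\Omega(\tau) \le C e^{(6-\sqrt 6\a_V|\Phi_1|-\e)\tau}$ for any $\e>0$ after first establishing that $\phi_1\to\Phi_1$; the slightly circular bookkeeping here is handled by a bootstrap. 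Once $\Omega$ decays exponentially, $2-q$ decays exponentially, and integrating $\Sigma_\pm' = -(2-q)\Sigma_\pm$ gives $|\Sp-\sigma_+| + |\Sm-\sigma_-| \le Ce^{\a\tau}$; plugging back into \eqref{eq:phioneev} (using $9V'\circ\phi_0/\theta^2$ controlled the same way) gives \eqref{eq:BIphias}, and \eqref{eq:BIlnthetaas} follows by integrating $(\ln\theta + 3\tau)' = -q+2 = 2\Omega$. The limiting constraint \eqref{eq:asymptotic hamiltonian constraint} is just the limit of $\Sp^2+\Sm^2+\tfrac16\phi_1^2+\Omega = 1$. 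The exclusion of the line segments joining special points to antipodes (i.e.\ $\sigma_- \ne 0$ together with an appropriate inequality, or the rotated versions) follows because those points correspond to LRS configurations and the solution is assumed anisotropic and non-LRS; more precisely, if $(\sigma_+,\sigma_-)$ lay on such a segment then $\mff$ from \eqref{eq:mff def} would have to vanish identically by an argument like Proposition~\ref{prop:limitcharsp}, forcing LRS.

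Next I would treat non-LRS Bianchi type II with, say, $\No\ne 0$ and $\Nt=\Nth=0$ (after applying a suitable $\psi_\sigma^\pm$; the general non-LRS type II case reduces to this up to permutation). Then $S_+ = \tfrac12(-\No\cdot 2\No) = -\No^2$, $S_- = 0$, and \eqref{eq:altqdef} gives $q = 2 - \tfrac32\No^2 - 2\Omega$, while \eqref{eq:nip} gives $\No' = (q - 4\Sp)\No$. The constraint \eqref{eq:constraint} reads $\Sp^2+\Sm^2+\Psi+\tfrac34\No^2 = 1$, so again everything is bounded. The key monotonicity is now for $\Sm$: from \eqref{eq:spmp}, $\Sm' = -(2-q)\Sm$ since $S_-=0$, and $2-q = \tfrac32\No^2 + 2\Omega \ge 0$, so $\Sm^2$ is nonincreasing and converges; I would argue $\Sm\to\sigma_-\ne 0$ by excluding the possibility $\sigma_- = 0$ (that is the LRS/type I boundary and would force, via the convergence of $N_2/N_3$ trivially and Proposition~\ref{prop:LRS BVIIz as char}-type reasoning adapted to type II, a contradiction with non-LRS — actually in type II with $\Nt=\Nth=0$ the LRS condition is $\Sm = 0$, so $\sigma_-=0$ would make the solution LRS, contrary to hypothesis). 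For $\No$: the standard type II dynamics has $\No\to 0$; one shows this by noting that on the type II constraint surface the flow approaches the Kasner circle, and $\Sp$ converges to a value $\sigma_+$; then $\No' = (q-4\Sp)\No$ with $q\to 2$ and $\Sp\to\sigma_+$ gives $\No \sim e^{(2-4\sigma_+)\tau}$, which tends to $0$ as $\tau\to-\infty$ precisely when $2-4\sigma_+ > 0$, i.e.\ $\sigma_+ < 1/2$. Conversely one must rule out $\sigma_+ \ge 1/2$: if $\sigma_+ > 1/2$, then $\No$ would blow up toward $-\infty$, contradicting boundedness; if $\sigma_+ = 1/2$ one uses a finer argument (e.g.\ a center-manifold-type estimate or the monotonicity of an auxiliary quantity) to still force $\No\to 0$ and in fact to exclude $\sigma_+ = 1/2$ in the non-LRS case, since $\sigma_+ = 1/2$ with $\Sm=\sigma_-$ on the Kasner circle means $(\sigma_+,\sigma_-)$ is a special point, and the solution limiting to a special point is LRS by Proposition~\ref{prop:limitcharsp}. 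Hence $\sigma_+ < 1/2$, strictly. Once $\No\to 0$, feeding back gives $\tfrac32\No^2 + 2\Omega \to 0$ exponentially (the $\Omega$ part as in type I, the $\No^2$ part because $\No\sim e^{(2-4\sigma_+)\tau}$ so $\No^2 \sim e^{(4-8\sigma_+)\tau}$), so $2-q$ decays at rate $\min\{4-8\sigma_+,\,6-\sqrt6\a_V|\Phi_1|\}$. Then, exactly as in type I, integrating the linear equations for $\Sigma_\pm$, $\phi_1$, $\ln\theta$ yields \eqref{eq:BISpmas}, \eqref{eq:BIphias}, \eqref{eq:BIlnthetaas} with $\a = \min\{4-8\sigma_+,6-\sqrt6\a_V|\Phi_1|\}$, and integrating $(\ln|\No|)' = q-4\Sp$ against its limit $2-4\sigma_+$ yields \eqref{eq:BIINzas}. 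The asymptotic constraint \eqref{eq:asymptotic hamiltonian constraint} is the limit of $\Sp^2+\Sm^2+\tfrac16\phi_1^2+\Omega+\tfrac34\No^2 = 1$.

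The main obstacle, I expect, will be the bootstrap/circularity in pinning down the \emph{sharp} decay constants $6-\sqrt 6\a_V|\Phi_1|$ and $4-8\sigma_+$: the rate at which $\Omega$ decays depends on $1/\theta^2 \sim e^{6\tau}$ times $V\circ\phi_0$, and $V\circ\phi_0 \sim e^{\sqrt 6\a_V|\phi_0|}$ with $\phi_0 \sim \Phi_1\tau$, so the decay rate of $\Omega$ is itself $6 - \sqrt 6\a_V|\Phi_1|$ only after one already knows $\phi_0(\tau)/\tau \to \Phi_1$ and that the error $\phi_0 - \Phi_1\tau$ is bounded — but that boundedness is part of \eqref{eq:BIphias}, which we are trying to prove. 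The resolution is a two-stage argument: first establish, using only the crude bounds $|\phi_1|\le\sqrt 6$ and $1/\theta \le Ce^{(3-\e)\tau}$, that $\phi_1\to\Phi_1$ and $|\phi_0 - \Phi_1\tau|$ is bounded (this needs that $\Omega,|V'\circ\phi_0|/\theta^2$ are integrable near $-\infty$, which follows from the crude exponential bound $\Omega \le Ce^{(6-\sqrt6\a_V\sqrt6 - \e)\tau}$-type estimate being integrable whenever $\sqrt6\a_V|\Phi_1| < 6$, i.e.\ $\a_V < 1$ and $|\Phi_1|\le\sqrt 6$ — which holds), and only then, with $\phi_0 = \Phi_1\tau + O(1)$ in hand, re-derive the sharp rate for $\Omega$ and hence for everything else. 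The Bianchi type II case has the additional subtlety of the borderline value $\sigma_+ = 1/2$, which requires invoking Proposition~\ref{prop:limitcharsp} (limiting to a special point $\Rightarrow$ LRS) to exclude it; one should double-check that $\sigma_+ = 1/2$ on the type II constraint surface indeed forces the $(\Sp,\Sm)$-limit onto a special point (it does: $\sigma_-^2 = 1 - \sigma_+^2 - \tfrac16\Phi_1^2$, and at $\sigma_+ = 1/2$ with $\No\to 0$ the constraint gives $\sigma_-^2 = 3/4 - \tfrac16\Phi_1^2$, which equals $3/4$ only if $\Phi_1 = 0$; so actually the special-point exclusion applies in the subcase $\Phi_1 = 0$, and for $\Phi_1\ne 0$, $\sigma_+ = 1/2$ one needs a separate direct argument showing such orbits do not exist or are LRS — this borderline bookkeeping is the finickiest part). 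Everything else is routine integration of linear scalar ODEs with exponentially decaying coefficients.
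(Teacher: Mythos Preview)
Your outline is essentially the same as the paper's, and most of the steps are right. There are a few places where the paper's argument differs from yours in a way worth knowing.

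\textbf{The exclusion of $\sigma_+ = 1/2$ in type II.} You correctly flag this as the sticking point and correctly observe that Proposition~\ref{prop:limitcharsp} does \emph{not} settle it when $\Phi_1\neq 0$ (since then $(\sigma_+,\sigma_-)$ need not lie on the Kasner circle). The paper does not use Proposition~\ref{prop:limitcharsp} here at all. Instead, once $V\in\mfP_{\a_V}^1$ is assumed, it argues directly: from $\Sp' = \tfrac32\No^2(2-\Sp) - 2\Omega\Sp \ge -2\Omega \ge -Ce^{6(1-\a_V)\tau}$ one integrates from $-\infty$ to get $\Sp(\tau) \ge \sigma_+ - Ce^{6(1-\a_V)\tau}$. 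If $\sigma_+ = 1/2$, this gives $q - 4\Sp \le (q-2) + Ce^{6(1-\a_V)\tau}$, and since $q-2$ is already known to be integrable to the past, one obtains $|\No(\tau)| = |\No(0)|\exp\bigl(-\int_\tau^0 (q-4\Sp)\,ds\bigr) \ge |\No(0)|e^{-C}$ for all $\tau\le 0$, contradicting $\No\to 0$. This works uniformly in $\Phi_1$ and is the missing ``separate direct argument'' you were looking for. Note also that in the first part of the lemma (no assumption on $V$), only $\sigma_+ \le 1/2$ is claimed; the strict inequality belongs to the second part.

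\textbf{Integrability of $\Omega$ and $\No^2$ in the first part.} Your claim that ``$\Omega\ge 0$ and $\phi_1^2\ge 0$ both converge individually'' because their sum $\Psi$ converges is not valid as stated. What the paper actually uses (and what suffices) is \emph{integrability}: in type~I, since $(\Sigma_\pm^2)' = -4\Omega\Sigma_\pm^2$ and at least one $\Sigma_\pm$ is bounded away from zero (anisotropy), $\Omega$ is integrable on $(-\infty,0]$; in type~II, writing $\Sm' = -\tfrac12(3\No^2 + 4\Omega)\Sm$ with $|\Sm|$ bounded away from zero gives $\No^2,\Omega\in L^1$. This integrability is what yields $e^{3\tau}\theta\to\theta_\infty$ (via $q-2$ integrable) and, combined with $(\No^2)'$ bounded, gives $\No\to 0$. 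Your appeal to ``standard type~II dynamics'' for $\No\to 0$ is circular as written (you assume approach to the Kasner circle to conclude $\No\to 0$).

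\textbf{Non-LRS exclusion in type I.} No $\mff$-style argument is needed here: in type~I the motion of $(\Sp,\Sm)$ is purely radial (both satisfy $\Sigma_\pm' = -2\Omega\Sigma_\pm$), so the orbit stays on a fixed ray through the origin. The LRS line segments are themselves rays through the origin, so a non-LRS initial point already lies off them and hence so does the limit.

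The two-stage bootstrap for the sharp rates is exactly as you describe and as the paper does it: first use the crude bound $|\phi'|\le\sqrt 6$ to get $\Omega\le Ce^{6(1-\a_V)\tau}$, deduce exponential convergence of $\phi_1\to\Phi_1$ and boundedness of $\phi-\Phi_1\tau$, then re-insert to sharpen the $\Omega$ rate to $6-\sqrt6\,\a_V|\Phi_1|$.
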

\begin{proof}
  In the case of Bianchi type I, the expressions $S_{\pm}$ vanish and the expression in the $N_{i}$ appearing in
  (\ref{eq:constraint}) vanishes as well. Using the relation (\ref{eq:altqdef}), the equations (\ref{eq:spmp})
  thus take the form
  \begin{equation}\label{eq:SpmprBtI}
    \Sp'=-2\Omega\Sp,\ \ \
    \Sm'=-2\Omega\Sm.
  \end{equation}
  Moreover, since we assume that the solution is not isotropic, $(\Sp,\Sm)\neq (0,0)$. This means that $(\Sp,\Sm)$ moves away from the origin
  radially as time decreases. Due to the bounds on $(\Sp,\Sm)$, this implies that $(\Sp,\Sm)$ converges to a limit, say $(\sigma_{+},\sigma_{-})$
  as $\tau\rightarrow-\infty$. Moreover, in the non-LRS setting, this limit cannot belong to any of the line
  segments connecting a special point to its antipodal point. Combining this observation with the constraint yields the conclusion
  that $\Psi$ converges to a limit. Since one of $\Sigma_{\pm}$ has to be non-vanishing, the equations (\ref{eq:SpmprBtI}) imply that $\Omega$ has
  to be integrable to the past. Next, note that
  \[
  \theta(\tau)=\theta(\tau_{0})\exp\big(-\textstyle{\int}_{\tau_{0}}^{\tau}[1+q(s)]ds\big)
  =\theta(\tau_{0})\exp\big(-3(\tau-\tau_{0})+\textstyle{\int}_{\tau_{0}}^{\tau}2\Omega(s)ds\big).
  \]
  Since $\Omega$ is integrable to the past, we conclude that there is a $\theta_{\infty}>0$ such that $e^{3\tau}\theta(\tau)$ converges to
  $\theta_{\infty}$.

  Turning to the Bianchi type II case, note that $\Sm\neq 0$ (since $\Sm=0$ would imply that the solution is LRS). Moreover, $S_{-}=0$, so that
  combining (\ref{eq:spmp}) and (\ref{eq:altqdef}) yields
  \[
  \Sm'=-\tfrac{1}{2}(3\No^{2}+4\Omega)\Sm.
  \]
  This implies that both $\No^{2}$ and $\Omega$ are integrable to the past. Moreover, $|\Sm|$ increases to the past, so that $\Sm$, since it is
  bounded, has to converge to a non-zero value $\sigma_{-}$. Due to these observations, both of the terms on the right hand side of the equation
  for $\Sp$, see (\ref{eq:spmp}), 
  are integrable to the past, so that $\Sp$ has to converge to a limit, say $\sigma_{+}$. Next, since $\No^{2}$ is integrable and $\No\No'$ is
  bounded, it follows that $\No$ has to converge to zero. Since $q-2$ is integrable to the past, the desired conclusion concerning $\theta$
  follows by an argument similar to the one given in the case of Bianchi type I. Turning to limitations on $\sigma_{+}$, note that if
  $\sigma_{+}>1/2$, then (\ref{eq:nip}) and (\ref{eq:fodef}), combined with the fact that $q-2$ is integrable to the past, means that $\No^{2}$
  diverges to infinity, contradicting the conclusion that $\No$ converges to $0$. Thus $\sigma_{+}\leq 1/2$. Finally, the convergence of $\Psi$
  follows from the constraint.
  
  Assume now, in addition, that $V\in \mfP_{\a_V}^1$ for some $\a_V\in (0,1)$. Combining the Hamiltonian constraint (\ref{eq:constraint}) with the
  definition of $\Psi$, (\ref{eq:PsiOmegadef}), it is clear that $|\phi'|\leq \sqrt{6}$, so that there is a constant $C_{\phi}\geq 0$ with the property
  that
  \begin{equation}\label{eq:phi coarse estimate}
    |\phi(\tau)|\leq \sqrt{6}|\tau|+C_{\phi}
  \end{equation}
  for all $\tau\leq 0$. This means that there is a constant $C$ with the property that
  $\Omega(\tau)\leq Ce^{6(1-\a_{V})\tau}$ for all $\tau\leq 0$. Next, we wish to prove that $\No$ converges to zero exponentially. Note, to
  this end, that
  \begin{equation}\label{eq:SpprBtIIver}
    \Sp'=\tfrac{3}{2}\No^{2}(2-\Sp)-2\Omega\Sp\geq -2|\Omega|\geq -Ce^{6(1-\a_{V})\tau}
  \end{equation}
  for all $\tau\leq 0$, where we appealed to (\ref{eq:spmp}), (\ref{eq:Spdef}) and (\ref{eq:altqdef}). Integrating this estimate
  from $-\infty$ to $\tau$ yields the conclusion that
  \[
  \Sp(\tau)\geq \sigma_{+}-Ce^{6(1-\a_{V})\tau}
  \]
  for all $\tau\leq 0$. From the above, we know that $\sigma_{+}\leq 1/2$. We now wish to exclude the possibility that $\sigma_{+}=1/2$. Assume, therefore,
  that $\sigma_{+}=1/2$. Then
  \[
  q-4\Sp\leq q-2+Ce^{6(1-\a_{V})\tau}.
  \]
  However, we already know $q-2$ to be integrable to the past. This means that there is a constant $C$ such that
  \[
  |\No(\tau)|=|\No(0)|\exp\big(-\textstyle{\int}_{\tau}^{0}[q(s)-4\Sp(s)]ds\big)\geq |\No(0)| e^{-C}
  \]
  for all $\tau\leq 0$. However, this is incompatible with the fact that $\No$ converges to zero. Thus $\sigma_{+}<1/2$. This implies that
  $\No$ converges to zero exponentially. Combining this fact with (\ref{eq:SpprBtIIver}) yields the conclusion that $\Sp$ converges exponentially
  to its limit. This means that $q-4\Sp=2-4\sigma_{+}+\dots$, where the dots represent exponentially decaying terms. Thus $\No=O[e^{(2-4\sigma_{+})\tau}]$.
  Moreover, there is an $\a>0$ such that $\Sigma_{\pm}-\sigma_{\pm}=O(e^{\a\tau})$ and $q-2=O(e^{\a\tau})$. Next, note that $V'\circ\phi/\theta^{2}$ decays
  to zero exponentially at the same rate as $\Omega$; this follows from $V\in \mfP_{\a_V}^1$ and an argument similar to the above. Considering
  (\ref{eq:phioneev}), it is thus clear that $\phi_{1}$ converges exponentially to a limit, say $\Phi_{1}$. This can be integrated to conclude that
  $\phi-\Phi_{1}\cdot\tau$ converges exponentially to a limit. In fact, (\ref{eq:BIphias}) holds. The estimates (\ref{eq:BISpmas}) and
  (\ref{eq:BIlnthetaas}) follow from the equations and the improved bounds on $\Omega$ and $\No$. The estimate (\ref{eq:BIINzas}) follows by integrating
  (\ref{eq:nip}), keeping (\ref{eq:fodef}) and (\ref{seq:asBI}) in mind. The equality (\ref{eq:asymptotic hamiltonian constraint}) follows by taking the
  limit of (\ref{eq:constraint}). The final statement of the lemma concerning $\a$ follows by repeating the above arguments with the improved estimate for
  $\phi$ in mind. 
\end{proof}
Next, we make a technical observation concerning Bianchi type VII${}_{0}$ solutions.
\begin{lemma}\label{lemma:NtNthbdBtVIIZ}
  Consider a Bianchi type VII${}_{0}$ development satisfying the conditions of Lemma~\ref{lemma:BianchiAdevelopment}. Assume that
  $\No=0$. Then $(\Nt\Nth)(\tau)$ is bounded for $\tau\leq 0$.
\end{lemma}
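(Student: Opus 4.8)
The plan is to exploit the Bianchi type VII${}_0$ structure with $\No=0$, where the commutator matrix has two nonzero entries of the same sign, say $\Nt,\Nth>0$. In this setting the relevant simplifications are $B_1=\No(\No-\Nt-\Nth)=0$, so that $S_+$ and $S_-$ reduce to expressions purely in $\Nt,\Nth$; in fact one computes $S_+=\tfrac12(\Nt-\Nth)^2+\tfrac12(\Nt-\Nth)(\Nt+\Nth)$ type terms and, more importantly, the product $\Nt\Nth$ satisfies a clean evolution equation. Specifically, from (\ref{eq:nip}) and (\ref{seq:fidef}) with $\No=0$,
\[
(\Nt\Nth)' = [f_2(q,\Sp,\Sm)+f_3(q,\Sp,\Sm)]\Nt\Nth = (2q+4\Sp)\Nt\Nth.
\]
So the first step is to record this identity and reduce the problem to controlling the sign and size of $2q+4\Sp$ along the flow for $\tau\leq 0$.

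The second step is to obtain a lower bound on $\Sp$ (equivalently an upper bound on the decay of $\Nt\Nth$ in the direction $\tau\to -\infty$, but here we want boundedness for $\tau \le 0$, i.e.\ control going to the \emph{past}). Since we are told the conditions of Lemma~\ref{lemma:BianchiAdevelopment} hold with $\mfT=\mrVIIz$ (hence $\mfT\neq\mrIX$ and $V\ge 0$), the global existence of the Wainwright--Hsu system on all of $\ro$ follows from Lemma~\ref{lemma:globalexistenceresceq}, and the Hamiltonian constraint (\ref{eq:constraint}) gives $\Sp^2+\Sm^2\leq 1$ together with $-1\le q\le 2$. The key remaining point is that one should look at the quantity $\Sp$ and use (\ref{eq:Sp prime LRS char}) (valid for Bianchi type VII${}_0$, where $\No=0$):
\[
\Sp' = -\tfrac{3}{2}(\Nt-\Nth)^{2}(\Sp+1)-2\Omega\Sp.
\]
For $\mfT=\mrVIIz$ we have $\Sp+1\geq 0$, so both terms are controlled: the first is $\leq 0$, and $|2\Omega\Sp|\leq 2\Omega$ with $\Omega\ge 0$. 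Hence $\Sp'\le 2\Omega$, and because $\theta$ is strictly positive and monotone with $\theta\to\infty$ as $\tau\to-\infty$, together with $V\ge 0$ bounded below by $0$, one shows $\Omega$ is integrable on $(-\infty,0]$ (this is the standard argument: $\theta\notin L^1$ near the singularity forces $1/\theta^2$ to decay fast enough; more directly, use that $q\to$ its limit and the analogue of the computation at the start of the proof of Proposition~\ref{prop:limitcharsp}). Integrating $\Sp'\le 2\Omega$ from $\tau$ to $0$ gives $\Sp(\tau)\ge \Sp(0)-2\int_{\tau}^{0}\Omega\,ds \ge \Sp(0)-C$, a uniform lower bound $\Sp(\tau)\ge -C_0$ for all $\tau\le 0$.

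The third step combines these: along $\tau\leq 0$ we then have $2q+4\Sp \ge 2(-1)+4(-C_0) = -2-4C_0 =: -c_1$ and also, trivially, $2q+4\Sp\le 4+4 = 8$. Going backward from $\tau=0$, boundedness above is what we need: from $(\Nt\Nth)'=(2q+4\Sp)\Nt\Nth$ and $2q+4\Sp\ge -c_1$ we get, for $\tau\le 0$,
\[
(\Nt\Nth)(\tau) = (\Nt\Nth)(0)\exp\Big(-\!\int_{\tau}^{0}(2q+4\Sp)(s)\,ds\Big) \le (\Nt\Nth)(0)\,e^{c_1|\tau|},
\]
which unfortunately grows; so one must do better. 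The genuine fix is to improve the lower bound on $2q+4\Sp$ using (\ref{eq:altqdef}): $q = 2 - \tfrac32[\Nt^2+\Nth^2-2\Nt\Nth] - 2\Omega = 2 - \tfrac32(\Nt-\Nth)^2 - 2\Omega$, so $2q+4\Sp = 4+4\Sp - 3(\Nt-\Nth)^2 - 4\Omega$, and since $4+4\Sp\ge 0$ for $\mfT=\mrVIIz$ this is $\ge -3(\Nt-\Nth)^2-4\Omega \ge -3(\Nt^2+\Nth^2) - 4\Omega$. The hard part — and I expect this to be the main obstacle — is closing the resulting differential inequality: writing $P:=\Nt\Nth$ and using $(\Nt-\Nth)^2 \le \Nt^2+\Nth^2$, while $\Nt^2+\Nth^2$ is itself constrained by (\ref{eq:constraint}) to be $\leq \tfrac43(1-\Sp^2-\Sm^2-\Psi)+2P \le C + 2P$, one gets $P' \ge (4+4\Sp)P - 3(C+2P)P - 4\Omega P \ge -3(C+2P)P - 4\Omega P$, a Riccati-type inequality. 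Boundedness of $P$ on $(-\infty,0]$ then follows because the constraint (\ref{eq:constraint}) forces $\tfrac34\big[\No^2+\Nt^2+\Nth^2-2(\cdots)\big]\le 1$, i.e.\ with $\No=0$, $\tfrac34(\Nt-\Nth)^2 - \tfrac32\Nt\Nth \cdot(\text{sign issue})$ — more carefully, $\Nt^2+\Nth^2-2\Nt\Nth=(\Nt-\Nth)^2\ge 0$ contributes a \emph{positive} term to the constraint so it is bounded by $4/3$, and then $P=\Nt\Nth$ is controlled only if we separately bound $\Nt+\Nth$. The cleanest route, and the one I would ultimately take, is: from $(\Nt-\Nth)^2 \le 4/3$ for all $\tau$ (immediate from the constraint since all other terms on the left of (\ref{eq:constraint}) are $\ge 0$ when $\No=0$), the two curves $\Nt,\Nth$ stay within bounded distance of each other; combined with the lower bound $\Sp\ge -C_0$ and the identity $(\Nt/\Nth)'=4\sqrt3\,\Sm\,(\Nt/\Nth)$ controlling the \emph{ratio}, one deduces that if $P$ were unbounded as $\tau\to-\infty$ then both $\Nt,\Nth\to\infty$, forcing (via (\ref{eq:nip}), (\ref{seq:fidef})) $q+2\Sp\to -\infty$, hence $\Sp\to-\infty$ (as $q\ge -1$), contradicting $\Sp\ge -C_0$. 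This contradiction establishes the boundedness of $(\Nt\Nth)(\tau)$ for $\tau\le 0$, completing the proof.
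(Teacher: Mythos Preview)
Your final contradiction argument does not work. From the constraint (\ref{eq:constraint}) with $\No=0$ we have $\Sp^2+\Sm^2+\Psi+\tfrac34(\Nt-\Nth)^2=1$, so $|\Sp|\le 1$; and (\ref{eq:altqdef}) gives $q\in[-1,2]$ for $\mfT\neq\mrIX$. Hence $q+2\Sp$ is \emph{bounded below} (by $-3$), so it cannot tend to $-\infty$ no matter what the $N_i$ do. Even interpreting your claim as ``$\int_\tau^0(q+2\Sp)\,ds\to+\infty$ as $\tau\to-\infty$'' (which is what $\Nt\Nth\to\infty$ actually forces), this is perfectly compatible with $\Sp$ bounded: one could have, say, $q+2\Sp\equiv -1$ with $q$ and $\Sp$ both bounded, and no contradiction arises. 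Your approach of bounding $\Nt\Nth$ via the scalar ODE $(\Nt\Nth)'=(2q+4\Sp)\Nt\Nth$ alone cannot close, precisely because the coefficient has a definite sign gap of size $O(1)$ available to it in the past direction.

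There is a second gap: the integrability of $\Omega$ on $(-\infty,0]$ is asserted but not justified. The lemma only assumes the conditions of Lemma~\ref{lemma:BianchiAdevelopment}, i.e.\ $0\le V\in C^\infty(\ro)$, with no growth hypothesis such as $V\in\mfP_{\a_V}^0$. Under these assumptions alone, $\Omega=\tfrac92 V\circ\phi/\theta^2$ is bounded (by the constraint) but need not be integrable, so your derivation of a uniform lower bound $\Sp\ge -C_0$ is incomplete---and, as noted, even that bound would not close the argument.

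The paper's proof takes a completely different route: it introduces the monotone quantity
\[
Z_{-1}:=\frac{\tfrac{4}{3}\Sm^{2}+(\Nt-\Nth)^{2}}{\Nt\Nth},
\]
which is well-defined and strictly positive because the conditions of Lemma~\ref{lemma:BianchiAdevelopment} exclude the LRS case (so $\Sm^2+(\Nt-\Nth)^2>0$ always). A direct computation gives $Z_{-1}'=-\tfrac{16}{3}\frac{(\Sp+1)\Sm^2}{\tfrac43\Sm^2+(\Nt-\Nth)^2}Z_{-1}\le 0$, so $Z_{-1}$ increases to the past. Then for $\tau\le 0$,
\[
(\Nt\Nth)(\tau)=\frac{\tfrac43\Sm^2+(\Nt-\Nth)^2}{Z_{-1}(\tau)}\le \frac{\tfrac43\Sm^2+(\Nt-\Nth)^2}{Z_{-1}(0)}\le \frac{4}{3\,Z_{-1}(0)},
\]
the last step by the constraint. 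The key idea you are missing is this auxiliary ratio: it couples $\Nt\Nth$ to quantities that the constraint \emph{does} bound, and it is monotone without any hypothesis on $V$ beyond non-negativity.
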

\begin{proof}
  By assumption, the development is neither isotropic nor LRS. Thus $(\Nt-\Nth)^{2}+\Sm^{2}$ is always non-zero. Define
  \[
  Z_{-1}:=\tfrac{\tfrac{4}{3}\Sm^{2}+(\Nt-\Nth)^{2}}{\Nt\Nth};
  \]
  \index{$\a$Aa@Notation!Expansion normalised!Expressions!Zminusone@$Z_{-1}$}%
  see \cite[p.~1429]{wah} and \cite[p.~63]{bogo}. Then $Z_{-1}$ is strictly positive and it can be computed that
  \[
  Z_{-1}'=-\tfrac{16}{3}\tfrac{(\Sp+1)\Sm^{2}}{\tfrac{4}{3}\Sm^{2}+(\Nt-\Nth)^{2}}Z_{-1}.
  \]
  In particular, $Z_{-1}$ increases to the past. This means that for $\tau\leq 0$,
  \[
    (\Nt\Nth)(\tau)\leq \tfrac{1}{Z_{-1}(0)}\left(\tfrac{4}{3}\Sm^{2}+(\Nt-\Nth)^{2}\right)(\tau)\leq\tfrac{4}{3Z_{-1}(0)},
  \]
  where we appealed to (\ref{eq:constraint}) in the last step. The lemma follows. 
\end{proof}
The goal of the next lemma is to provide a condition ensuring that $\Omega$ decays to zero exponentially in the direction of
the singularity.
\begin{lemma}\label{lemma:X growth}
  Consider a development satisfying the conditions of Lemma~\ref{lemma:BianchiAdevelopment} which is not an isotropic Bianchi type I development.
  Then there is a constant $c_\theta>0$ such that $\theta(\tau)\geq c_\theta e^{-\tau}$ for all $\tau\leq 0$. Assuming, in addition, that
  $V\in \mfP_{\a_V}^{0}$ for some $\a_V\in (0,1/3)$, it follows that $\Omega$ decays exponentially. In fact, there is a constant $C_{\Omega}$ such that,
  for all $\tau\leq 0$,
  \begin{equation}\label{eq:Omegadec prel}
    \Omega(\tau)\leq C_{\Omega}e^{2(1-3\a_{V})\tau}.
  \end{equation}
  Assume, in addition to the above, that the solution is of Bianchi type VI${}_{0}$ or VII${}_{0}$ with $\No=0$. Then there are constants
  $\theta_\infty>0$ and $C_{\Omega}$ such that $e^{3\tau}\theta(\tau)\rightarrow\theta_\infty$ and 
  \begin{equation}\label{eq:Omegadec}
    \Omega(\tau)\leq C_{\Omega}e^{6(1-\a_V)\tau}
  \end{equation}
  for all $\tau\leq 0$. Finally, there is a $\sigma_{+}>-1$ such that
  \begin{equation}\label{eq:BVIzVIIzprelas}
    \lim_{\tau\rightarrow-\infty}\Sp(\tau)=\sigma_{+},\ \ \
    \lim_{\tau\rightarrow-\infty}q(\tau)=2,\ \ \
    \lim_{\tau\rightarrow-\infty}(\Nt-\Nth)(\tau)=0.
  \end{equation}  
\end{lemma}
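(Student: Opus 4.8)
The plan is to prove Lemma~\ref{lemma:X growth} in the order the conclusions are stated, exploiting the Hamiltonian constraint, the equation $\theta'=-(1+q)\theta$, and the monotone quantities used in the earlier lemmas. First I would establish the lower bound $\theta(\tau)\geq c_\theta e^{-\tau}$. Since the solution is not isotropic Bianchi type I, Lemma~\ref{lemma:BianchiAdevelopment} (together with the Wainwright--Hsu reformulation, via (\ref{eq:theta t ito X}) rewritten as (\ref{eq:thetaprime})) gives that $q$ is bounded above; indeed the constraint (\ref{eq:constraint}) with the $N_i$-polynomial non-negative (recall we exclude Bianchi type IX here, or if not, we use that $X>0$ for $\tau\leq\tau_0$ so $\Omega\leq 1/2$ and hence $q\leq 2$ from (\ref{eq:altqdef})) yields $q\leq 2$. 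Then $\theta'=-(1+q)\theta\geq -3\theta$, so $(\ln\theta)'\geq -3$, and integrating from $\tau\leq 0$ to $0$ gives $\ln\theta(0)-\ln\theta(\tau)\geq 3\tau$, i.e. $\theta(\tau)\geq\theta(0)e^{-3\tau}$... wait, that is too strong; the correct bookkeeping gives $\theta(\tau)\geq\theta(0)e^{-3\tau}$ only with the wrong sign, so I would instead note $(\ln\theta)'\geq -3$ implies $\ln\theta(\tau)\leq\ln\theta(0)-3\tau$ going backward is automatic, and for the lower bound use $q\leq 2\Rightarrow 1+q\leq 3$, hence $-(1+q)\geq -3$, hence $(\ln\theta)'\geq-3$, hence for $\tau\leq 0$, $\ln\theta(\tau)=\ln\theta(0)-\int_\tau^0(\ln\theta)'\,ds\leq\ln\theta(0)+3|\tau|$... this shows $\theta$ grows \emph{at most} exponentially backward. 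The lower bound I actually want comes from $q\geq -1$ being false in general; rather I should use that in the direction of the singularity $\theta\to\infty$ and more precisely $\theta'\leq 0$ fails. The cleanest route: $q\leq 2$ gives $\theta'\geq -3\theta$ so $e^{3\tau}\theta$ is non-decreasing in $\tau$, hence for $\tau\le 0$, $e^{3\tau}\theta(\tau)\le\theta(0)$, which is an upper bound; for the lower bound I use instead $\theta'=-(1+q)\theta$ with $q\geq -1$ (from (\ref{eq:constraint}), all terms on the left are $\le 1$ so $\Sp^2+\Sm^2\le 1$ and $\Psi\ge 0$... actually $q=2(\Sp^2+\Sm^2+\Psi-\Omega)$ and $\Omega\le\Psi$, so $q\ge 2(\Sp^2+\Sm^2)\ge 0\ge -1$), giving $\theta'\le 0$? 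No: $-(1+q)\le -1<0$ so $\theta'\le -\theta<0$, hence $(\ln\theta)'\le -1$, hence $\ln\theta(\tau)\ge\ln\theta(0)+|\tau|$ for $\tau\le 0$, i.e. $\theta(\tau)\ge\theta(0)e^{-\tau}$. This gives $c_\theta=\theta(0)$.

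Next I would prove the exponential decay of $\Omega$. By the Hamiltonian constraint and (\ref{eq:PsiOmegadef}) we have $|\phi_1|\le\sqrt 6$, so integrating (\ref{eq:phizev}) gives $|\phi(\tau)|\le\sqrt 6|\tau|+C_\phi$ for $\tau\le 0$, as in the proof of Lemma~\ref{lemma:BIBIIasympt}. Since $V\in\mfP_{\a_V}^0$, $V\circ\phi_0(\tau)\le c_0 e^{\sqrt 6\a_V|\phi(\tau)|}\le C e^{6\a_V|\tau|}=Ce^{-6\a_V\tau}$ for $\tau\le 0$. Combining with the lower bound $\theta(\tau)\ge c_\theta e^{-\tau}$, so $\theta^{-2}(\tau)\le c_\theta^{-2}e^{2\tau}$, we get $\Omega(\tau)=\tfrac92 V\circ\phi_0(\tau)/\theta^2(\tau)\le C_\Omega e^{(2-6\a_V)\tau}=C_\Omega e^{2(1-3\a_V)\tau}$, which is (\ref{eq:Omegadec prel}); the exponent is positive precisely because $\a_V<1/3$.

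Then I would specialise to Bianchi type VI${}_0$ or VII${}_0$ with $\No=0$ to upgrade the decay rate and obtain the limits. With $\No=0$, the expression $S_+$ in (\ref{eq:Spdef}) reduces to $\tfrac12(\Nt-\Nth)^2\ge 0$. Plugging into (\ref{eq:spmp}) via (\ref{eq:Sp prime LRS char}) specialised to $\No=0$ gives $\Sp'=-\tfrac32(\Nt-\Nth)^2(\Sp+1)-2\Omega\Sp$. In the VII${}_0$ case $\Nt\Nth>0$ and Lemma~\ref{lemma:NtNthbdBtVIIZ} gives $(\Nt\Nth)$ bounded, while in the VI${}_0$ case $\Nt\Nth<0$ and $(\Nt-\Nth)^2\le\tfrac43$ from (\ref{eq:constraint}). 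One checks $\Sp+1\ge 0$ in both cases (it is the combination appearing in the constraint), so the first term is $\le 0$; hence $\Sp'\ge -2|\Omega\Sp|\ge -2|\Omega|\ge -Ce^{2(1-3\a_V)\tau}$, which is integrable over $(-\infty,0]$. Since $\Sp$ is bounded, monotonicity-up-to-integrable-error forces $\Sp(\tau)\to\sigma_+$ for some $\sigma_+$; that $\sigma_+>-1$ follows because $\Sp=-1$ forces, via the constraint, $\Nt=\Nth$ and $\Psi=0$, i.e. an LRS or isotropic solution, which is excluded here (the development satisfies the conditions of Lemma~\ref{lemma:BianchiAdevelopment} and is not among the excluded symmetric types) — I would argue $\Sp$ cannot decrease all the way to $-1$ using that $\Sp'$ is integrable and the constraint keeps us bounded away. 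Given $\Sp\to\sigma_+>-1$, the constraint (\ref{eq:constraint NoNthNth}) (with $\No=0$) reads $\Sp^2+\Sm^2+\Psi+\tfrac34(\Nt-\Nth)^2-\tfrac32\Nt\Nth=1$ in the VI${}_0$/VII${}_0$ cases; since $\Sp^2\to\sigma_+^2<1$ and $\Psi\ge 0$, the bounded monotone-type quantities $Z_{-1}$ (from Lemma~\ref{lemma:NtNthbdBtVIIZ}) or its VI${}_0$ analogue show $(\Nt-\Nth)\to 0$ and $\Nt\Nth\to$ its limit forces $\Psi\to\Psi_\infty$ and, feeding back, $q\to 2$ by (\ref{eq:altqdef}). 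Finally, with $q\to 2$ and $(\Nt-\Nth)^2$, $\Nt\Nth$ and $\Omega$ all integrable, $\theta'=-(1+q)\theta$ gives $\theta(\tau)=\theta(0)\exp\big(-\int_0^\tau(1+q)\,ds\big)=\theta(0)\exp\big(-3\tau+\int_0^\tau(2-q)\,ds\big)$ with $2-q$ integrable, so $e^{3\tau}\theta(\tau)\to\theta_\infty>0$; and the improved decay (\ref{eq:Omegadec}) follows since now $\theta(\tau)\sim\theta_\infty e^{-3\tau}$ gives $\theta^{-2}\le Ce^{6\tau}$, hence $\Omega\le C_\Omega e^{(6-6\a_V)\tau}=C_\Omega e^{6(1-\a_V)\tau}$.

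The main obstacle I expect is handling the Bianchi type IX case uniformly (the lemma only says "a development satisfying the conditions of Lemma~\ref{lemma:BianchiAdevelopment}", which in type IX requires $\mfI\in\mB_{\mrIX,+}[V]$ and hence only gives $q\le 2$ and the good sign of $X$ on $\tau\le\tau_0$ rather than globally), and more delicately, ruling out $\sigma_+=-1$: the integrability of $\Sp'$ alone gives convergence but not the strict inequality, so I would need to combine it with the structure of the constraint and the fact that the solution is genuinely of Bianchi type VI${}_0$ or VII${}_0$ (so $\Nt,\Nth\ne 0$) to show that $\Sp+1$ and $(\Nt-\Nth)^2$ cannot both be forced to zero in a way compatible with $\Psi\ge 0$ and the non-LRS assumption — essentially a soft argument that if $\sigma_+=-1$ then $\mff$ in (\ref{eq:mff def}) would have to vanish by an argument parallel to Proposition~\ref{prop:limitcharsp}, contradicting the Bianchi type. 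The bookkeeping of which monotone quantity ($Z_{-1}$ versus a VI${}_0$ substitute) to use in each subcase is the other fiddly point, but not conceptually hard.
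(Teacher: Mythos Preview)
Your argument for the lower bound $\theta(\tau)\geq c_\theta e^{-\tau}$ contains a genuine error. You claim $\Omega\leq\Psi$, but from (\ref{eq:PsiOmegadef}) one has $\Psi-\Omega=\tfrac{1}{6}\phi_1^2-\tfrac{3}{2}\theta^{-2}V\circ\phi_0$, which is negative whenever $V\circ\phi>\phi_t^2$; this certainly occurs (e.g.\ near an isotropic Bianchi~IX fixed point, which the lemma does not exclude). Hence $q\geq 0$ is false in general, and your integration of $(\ln\theta)'\leq -1$ is unjustified. The bound $q\in[-1,2]$ from Lemma~\ref{lemma:BianchiAdevelopment} only gives $\theta$ non-decreasing backward, not the exponential rate you need.

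The paper instead uses the quantity $X=\theta^2-\tfrac{3}{2}\phi_t^2-3V\circ\phi$ from (\ref{eq:Xdef}). By (\ref{eq:dotXest}) one has $X'\leq -2X$ in $\tau$-time, so $X(\tau)\geq X(0)e^{-2\tau}$ for $\tau\leq 0$; since the solution is not isotropic Bianchi~I, $X(0)>0$ (see the proof of Lemma~\ref{lemma:BianchiAdevelopment}), and since $V\geq 0$ gives $\theta^2\geq X$, the bound $\theta(\tau)\geq\sqrt{X(0)}\,e^{-\tau}$ follows. This is the missing idea.

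Your treatment of the VI${}_0$/VII${}_0$ part has the right ingredients but the order is slightly off. You try to get convergence of $\Sp$ first and then argue $\sigma_+>-1$; the paper instead first invokes Proposition~\ref{prop:limitcharsp} directly (not ``an argument parallel to'' it) to rule out $\Sp\to -1$, uses this to obtain a uniform lower bound $\Sp+1\geq\eta/2$ for $\tau$ sufficiently negative, and only then extracts integrability of $(\Nt-\Nth)^2$ from (\ref{eq:SpprVIzVIIz}), after which convergence of $\Sp$ and the remaining conclusions follow. Your route can be made to work (the quantity $\Sp+\int_\tau^0 2\Omega\Sp\,ds$ is monotone and bounded, forcing $\Sp$ to converge), but you still need Proposition~\ref{prop:limitcharsp} for the strict inequality $\sigma_+>-1$, and you should cite it rather than re-derive it.
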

\begin{remark}
  In the case of Bianchi type VI${}_{0}$, the last equality in (\ref{eq:BVIzVIIzprelas}) implies that both $\Nt$ and $\Nth$ converge to zero.
\end{remark}
\begin{proof}  
  Due to (\ref{eq:dotXest}), $X'\leq -2X$. Thus $X(\tau)\geq X(0)e^{-2\tau}$ for all $\tau\leq 0$. Since $X(0)>0$ due to the assumptions,
  see the proof of Lemma~\ref{lemma:BianchiAdevelopment}, and since $\theta^{2}\geq X$, it is clear that there is a constant $c_\theta>0$
  such that $\theta(\tau)\geq c_\theta e^{-\tau}$ for all $\tau\leq 0$. If (\ref{eq:phi coarse estimate}) holds, we conclude that
  (\ref{eq:Omegadec prel}) holds, since $V\in \mfP_{\a_V}^{0}$ and $\a_V\in (0,1/3)$. However, (\ref{eq:phi coarse estimate}) holds if
  $\mfT\neq\mrIX$. Moreover, if $\mfT=\mrIX$, the fact that $X(t)>0$ for all $t\leq t_0$ implies that $|\phi'|\leq\sqrt{6}$ for $\tau\leq 0$, so
  that (\ref{eq:phi coarse estimate}) holds for $\mfT=\mrIX$ as well. 

  Assume now that the solution is of Bianchi type VI${}_{0}$ or VII${}_{0}$ with $\No=0$. Note that 
  \begin{equation}\label{eq:SpprVIzVIIz}
    \Sp'=-\tfrac{3}{2}(\Nt-\Nth)^{2}(\Sp+1)-2\Omega\Sp
  \end{equation}  
  due to (\ref{eq:spmp}), (\ref{eq:Spdef}) and (\ref{eq:altqdef}). Due to (\ref{eq:Omegadec prel}), the second term on the right hand side
  converges to zero exponentially. Next, note that $\Sp(\tau)$ cannot converge to $-1$ as $\tau\rightarrow-\infty$ due to
  Proposition~\ref{prop:limitcharsp}.
  This means that there is an $\eta>0$ and a sequence $\tau_{k}\rightarrow-\infty$ such that $1+\Sp(\tau_{k})\geq\eta$. Consider (\ref{eq:SpprVIzVIIz}).
  The first term is non-positive and only causes $\Sp$ to increase towards the past. The second term converges to zero exponentially
  (since $|\Sp|\leq 1$). This means that for $k$ large enough
  \[
  \big|\textstyle{\int}_{-\infty}^{\tau_{k}}2\Omega(s)\Sp(s)ds\big|\leq \tfrac{\eta}{2}.
  \]
  In particular, for $k$ large enough, $\Sp(\tau)+1\geq\eta/2$ for all $\tau\leq \tau_{k}$. Combining this observation with (\ref{eq:SpprVIzVIIz}), we
  conclude that $(\Nt-\Nth)^{2}$ is integrable to the past. Combining the fact that $\Omega$ and $(\Nt-\Nth)^{2}$ are integrable to the past with
  (\ref{eq:altqdef}) yields the conclusion that $q-2$ is integrable to the past. Thus $e^{3\tau}\theta(\tau)\rightarrow\theta_\infty$ and 
  (\ref{eq:Omegadec}) hold by arguments similar to ones presented in the proof of Lemma~\ref{lemma:BIBIIasympt}.

  Combining (\ref{eq:SpprVIzVIIz}) with the fact that $(\Nt-\Nth)^{2}$ and $\Omega$ are integrable to the past, we conclude that $\Sp(\tau)$
  converges to a limit $\sigma_{+}$ as $\tau\rightarrow-\infty$. By the above, $\sigma_{+}>-1$. Since $(\Nt-\Nth)^{2}$ is integrable to the past
  and has bounded derivative to the past (this follows from the constraint (\ref{eq:constraint}) and Lemma~\ref{lemma:NtNthbdBtVIIZ}), it follows
  that $\Nt-\Nth$ converges to zero. Since $\Omega$ and $\Nt-\Nth$ converge to zero, (\ref{eq:altqdef}) yields the conclusion that $q$ converges to $2$.
  The lemma follows.  
\end{proof}
It is of interest to note that this lemma has the following corollary.
\begin{cor}\label{cor:ap eq pp}
  Let $0\leq V\in \mfP_{\a_V}^{0}$ for some $\a_V\in (0,1/3)$. Then $\mB_{\mrIX,+}[V]\subset \mB_{\mrIX,\roap}[V]$, so that
  $\mB_{\mrIX,\roap}[V]=\mB_{\mrIX,\ropp}[V]$; see Definition~\ref{definition:Bap Bp Bpp}. 
\end{cor}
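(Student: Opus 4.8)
The plan is to show that an element $\mfI\in\mB_{\mrIX,+}[V]$ satisfies the two defining conditions of $\mB_{\mrIX,\roap}[V]$, namely that $\theta(t)>0$ in a neighbourhood of the singularity and that, for every $\e>0$, there is a $T\in J$ with $X(t)/\theta^2(t)>-\e$ and $V\circ\phi(t)/\theta^2(t)<\e$ for all $t\leq T$. Once this is established, the inclusion $\mB_{\mrIX,+}[V]\subset\mB_{\mrIX,\roap}[V]$ gives $\mB_{\mrIX,\roap}[V]\supseteq\mB_{\mrIX,+}[V]\cup\mB_{\mrIX,\roap}[V]=\mB_{\mrIX,\ropp}[V]$, while the reverse inclusion $\mB_{\mrIX,\roap}[V]\subseteq\mB_{\mrIX,\ropp}[V]$ is immediate from Definition~\ref{definition:Bap Bp Bpp}; hence $\mB_{\mrIX,\roap}[V]=\mB_{\mrIX,\ropp}[V]$.

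First I would invoke Lemma~\ref{lemma:BianchiAdevelopment} in the Bianchi type IX case: since $\mfI\in\mB_{\mrIX,+}[V]$, there is a $t_0\in J$ with $\theta(t)\to\infty$ as $t\downarrow 0$, $\theta_t(t)<0$ for all $t\leq t_0$, $X(t)>0$ for all $t\leq t_0$, and $\theta\notin L^1(0,t_0)$. In particular $\theta(t)>0$ for $t\leq t_0$, so the first condition in the definition of almost positive initial data holds. Moreover $X(t)>0$ for $t\leq t_0$ means $X(t)/\theta^2(t)>0>-\e$ for every $\e>0$ on that interval, so the first of the two asymptotic inequalities is trivially satisfied (with $T=t_0$). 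It remains to verify $V\circ\phi(t)/\theta^2(t)\to 0$ as $t\downarrow 0$.

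The control of $V\circ\phi/\theta^2$ is where the hypothesis $0\leq V\in\mfP_{\a_V}^0$ with $\a_V\in (0,1/3)$ enters, and this is the part that needs Lemma~\ref{lemma:X growth}. After passing to the time coordinate $\tau$ of Lemma~\ref{lemma:tautimecoord BIX} (legitimate because the conditions of Lemma~\ref{lemma:BianchiAdevelopment} hold with $\mfT=\mrIX$) and to the Wainwright--Hsu variables, the quantity $V\circ\phi/\theta^2$ is (up to a constant) $\Omega$; see (\ref{eq:PsiOmegadef}). Since $\mfI\in\mB_{\mrIX,+}[V]$ corresponds to a development that is not an isotropic Bianchi type I development, Lemma~\ref{lemma:X growth} applies and gives (\ref{eq:Omegadec prel}): there is a constant $C_\Omega$ with $\Omega(\tau)\leq C_\Omega e^{2(1-3\a_V)\tau}$ for all $\tau\leq 0$. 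Because $\a_V<1/3$, the exponent $2(1-3\a_V)$ is strictly positive, so $\Omega(\tau)\to 0$ as $\tau\to-\infty$, i.e. $V\circ\phi(t)/\theta^2(t)\to 0$ as $t\downarrow 0$. Hence for every $\e>0$ there is a $T\in J$ with $V\circ\phi(t)/\theta^2(t)<\e$ for all $t\leq T$; combined with the already established positivity of $X/\theta^2$ on $(0,t_0]$, this verifies the second condition. Therefore $\mfI\in\mB_{\mrIX,\roap}[V]$.

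The only real subtlety — the ``main obstacle'' in an otherwise short argument — is making sure the hypotheses of Lemma~\ref{lemma:X growth} are genuinely met: one must check that a development arising from $\mfI\in\mB_{\mrIX,+}[V]$ indeed satisfies the conditions of Lemma~\ref{lemma:BianchiAdevelopment} and is not isotropic Bianchi type I (the latter is automatic since the Bianchi type is IX), and one must confirm that the coarse estimate $|\phi'|\leq\sqrt{6}$ used inside the proof of (\ref{eq:Omegadec prel}) is available here — which it is, precisely because $X(t)>0$ for $t\leq t_0$ forces $3V\circ\phi/\theta^2\leq 1$ via (\ref{eq:X ver HC}), hence $\tfrac32\phi'^2/\theta^2\leq\tfrac{3}{2}$ after expansion normalisation. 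Everything else is bookkeeping with the definitions. I would therefore keep the write-up to a couple of sentences citing Lemmas~\ref{lemma:BianchiAdevelopment} and~\ref{lemma:X growth} and Definition~\ref{definition:Bap Bp Bpp}.
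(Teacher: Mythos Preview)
Your proposal is correct and follows essentially the same approach as the paper: invoke Lemma~\ref{lemma:BianchiAdevelopment} to obtain $X(t)>0$ for all $t\leq T$ (so the $X/\theta^2>-\e$ condition is trivial), and then appeal to Lemma~\ref{lemma:X growth} to conclude $\Omega\to 0$. The paper's proof is a two-sentence version of exactly this argument.
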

\begin{proof}
  Since there is a $T$ such that $X(t)>0$ for all $t\leq T$ due to Lemma~\ref{lemma:BianchiAdevelopment} and the assumptions, all that remains
  to be proven is that $\Omega$ converges to zero. However, that $\Omega$ converges to zero follows from Lemma~\ref{lemma:X growth}.
\end{proof}

\section{The vacuum and matter dominated cases}\label{ssection:vacmatdom}

Next, we demonstrate that there is a dichotomy: either the expansion normalised energy density decays exponentially towards the past, or it
converges to a non-zero contribution.

\begin{thm}\label{thm:dichotomy}
  Consider a development satisfying the conditions of Lemma~\ref{lemma:BianchiAdevelopment} which is not an isotropic Bianchi type I development. Assume
  $V\in \mfP_{\a_V}^1$, where
  $\a_V\in (0,1)$ in case of Bianchi type I and non-LRS Bianchi type II; and $\a_V\in (0,1/3)$ otherwise. Then either there is a constant $C$ such that
  \begin{equation}\label{eq:Omegaphiprstrongdec}
    \Omega(\tau)+|\phi'(\tau)|\leq C\exp\big(-\textstyle{\int}_{\tau}^{0}2[1+q(s)]ds\big)
  \end{equation}
  for all $\tau\leq 0$; or there are constants $\a_{\infty}>0$, $\theta_{\infty}>0$, $C_{\infty}$, $\sigma_{\pm}$, $\Phi_{1}\neq 0$ and $\Phi_{0}$ such that
  \begin{subequations}\label{seq:SpmphiprNilimmatterdom}
    \begin{align}
      |\Sp(\tau)-\sigma_{+}|+|\Sm(\tau)-\sigma_{-}| \leq & C_{\infty}e^{\a_{\infty}\tau},\label{eq:SpSmlimmatterdom}\\
      |\phi'(\tau)-\Phi_{1}|+|\phi(\tau)-\Phi_{1}\tau-\Phi_{0}| \leq & C_{\infty}e^{\a_{\infty}\tau},\label{eq:phiprphilimmatterdom}\\
      |\No(\tau)|+|\Nt(\tau)|+|\Nth(\tau)| \leq & C_{\infty}e^{\a_{\infty}\tau},\label{eq:Nilimmatterdom}\\
      |\ln\theta(\tau)+3\tau-\ln\theta_{\infty}| \leq & C_{\infty}e^{\a_{\infty}\tau},\label{eq:lnthetalimmatterdom}\\
      \Omega(\tau) \leq & C_\infty e^{6(1-\a_V)\tau}
    \end{align}
  \end{subequations}  
  for all $\tau\leq 0$. Moreover, in the latter case, if $N_i\neq 0$, then $f_i(2,\sigma_+,\sigma_-)>0$.

  Assume, on the contrary, that the conditions of Lemma~\ref{lemma:BianchiAdevelopment} are not satisfied, but that the conditions of
  Lemma~\ref{lemma:Bianchi IX remainder} are. Assume, moreover, that $V\in \mfP_{\a_V}^1$ for some $\a_V\in (0,1)$. Then the second case described
  above holds. In particular, (\ref{seq:SpmphiprNilimmatterdom}) holds. Moreover, $\sigma_\pm=0$ and $\Phi_1^2=6$.  
\end{thm}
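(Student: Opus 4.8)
The plan is to transfer the analysis to the Wainwright--Hsu variables on the $\tau$-interval $(-\infty,0]$, which by Lemma~\ref{lemma:tautimecoord BIX} lies in the existence interval. Since the conditions of Lemma~\ref{lemma:BianchiAdevelopment} are not met, there is no $t\le t_0$ with $X(t)>0$ --- such a $t$ would place $\mfI$ in $\mB_{\mrIX,+}[V]$, contradicting the failure of those conditions --- so $X(t)\le 0$ for all $t\le t_0$. Exactly as in the proof of Lemma~\ref{lemma:Bianchi IX remainder} (cf.\ also Proposition~\ref{prop:dichotomy roap}), the defining properties of $\mB_{\mrIX,\roap}[V]$ then give $X/\theta^2\to 0$ and $\Omega\to 0$ as $\tau\to-\infty$. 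Using $X/\theta^2=1-\Psi$ and $\Psi=\tfrac{1}{6}\phi_1^2+\tfrac{2}{3}\Omega$, this yields $\Psi\ge 1$ for $\tau$ near $-\infty$, $\Psi\to 1$, and $\phi_1^2\to 6$.

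The first real step is to control the (sign-indefinite) spatial curvature term. Put $P_N:=\No^2+\Nt^2+\Nth^2-2(\No\Nt+\Nt\Nth+\No\Nth)$, so that the Hamiltonian constraint (\ref{eq:constraint}) reads $\Sp^2+\Sm^2+\Psi+\tfrac{3}{4}P_N=1$ and (\ref{eq:altqdef}) reads $q=2-\tfrac{3}{2}P_N-2\Omega$. From $\Psi\ge 1$ we get $\tfrac{3}{4}P_N=1-\Psi-\Sp^2-\Sm^2\le 0$, hence $\Sp^2+\Sm^2\le-\tfrac{3}{4}P_N$. Since all the $N_i$ carry a fixed sign in Bianchi type IX, set $w:=|\No\Nt\Nth|>0$; applying Lemma~\ref{lemma: bS low bd} and dividing by $\theta^2$ gives $P_N\ge-10\,w^{2/3}$, so that
\begin{equation*}
  \Sp^2+\Sm^2\le\tfrac{15}{2}\,w^{2/3}.
\end{equation*}
By (\ref{eq:nip}) and (\ref{seq:fidef}) one has $w'=(f_1+f_2+f_3)w=3qw$, while $q=2(\Sp^2+\Sm^2+\Psi-\Omega)\ge 2(\Psi-\Omega)\ge 1$ for $\tau$ sufficiently negative; thus $(\ln w)'\ge 3$ there, $w(\tau)\le Ce^{3\tau}$, and therefore $\Sp^2+\Sm^2\le Ce^{2\tau}$. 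This forces $(\Sp,\Sm)\to(\sigma_+,\sigma_-)=(0,0)$ at an exponential rate; feeding this back into the constraint gives $P_N\to 0$ and hence, by (\ref{eq:altqdef}), $q\to 2$, with $q\le 2+15\,w^{2/3}$ bounding $q$ from above.

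It remains to promote these limits to the full list (\ref{seq:SpmphiprNilimmatterdom}). From $\phi_1^2\le 6\Psi$ and $\Psi\to 1$ one has $|\phi_1|\le\sqrt6+\e$ eventually, hence $|\phi(\tau)|\le(\sqrt6+\e)|\tau|+C$; together with $V\in\mfP_{\a_V}^1$, $\a_V\in(0,1)$, and the lower bound $\theta(\tau)\ge c\,e^{-(3-\e)\tau}$ (read off from $\theta'=-(1+q)\theta$ and $1+q\ge 3-\e$ near $-\infty$), this shows that $\Omega$ and $V'\circ\phi_0/\theta^2$ decay exponentially. Then (\ref{eq:phioneev}) gives $\phi_1\to\Phi_1$ exponentially with $\Phi_1^2=6$, and integrating $\phi_0'=\phi_1$ produces the second line of (\ref{seq:SpmphiprNilimmatterdom}); in particular $|\phi(\tau)|\le\sqrt6\,|\tau|+C'$. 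Since $q-2=-\tfrac{3}{2}P_N-2\Omega$ is now integrable on $(-\infty,0]$, (\ref{eq:thetaprime}) gives $\ln\theta+3\tau\to\ln\theta_\infty$ with the stated rate together with $\theta(\tau)\ge c\,e^{-3\tau}$ near $-\infty$; inserting this and the sharpened bound on $|\phi|$ into $\Omega=\tfrac{9}{2}\,V\circ\phi_0/\theta^2$ gives exactly $\Omega(\tau)\le C_\infty e^{6(1-\a_V)\tau}$. The estimates on $\Sp,\Sm$ and on each $N_i$ follow from the equations; since $f_i(2,0,0)=2>0$ for $i=1,2,3$, the closing assertion that $f_i(2,\sigma_+,\sigma_-)>0$ whenever $N_i\ne 0$ is automatic. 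A brief bootstrap, as in the proof of Lemma~\ref{lemma:BIBIIasympt}, then unifies all the decay exponents into a common $e^{\a_\infty\tau}$.

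The one genuinely new difficulty relative to the lower Bianchi types is that in type IX the curvature term $\tfrac{3}{4}P_N$ in the Hamiltonian constraint has no fixed sign, so $\Sp^2+\Sm^2\to 0$ cannot be read off directly. This is precisely what the combination of Lemma~\ref{lemma: bS low bd} (bounding $-P_N$ by a power of $w=|\No\Nt\Nth|$) and the $\mB_{\mrIX,\roap}[V]\setminus\mB_{\mrIX,+}[V]$ hypothesis (which, through $\Psi\to 1$ and $\Omega\to 0$, forces $q\ge 1$ near the singularity and hence exponential decay of $w$) is designed to overcome; once $(\Sp,\Sm)\to(0,0)$ is in hand, the remaining estimates are a routine bootstrap on the expansion-normalised system.
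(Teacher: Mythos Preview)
Your proof of the second paragraph of the theorem is correct, and its core is the same as the paper's: both control the sign-indefinite Bianchi~IX curvature term by bounding $|P_N|$ via Lemma~\ref{lemma: bS low bd} in terms of $w=|N_1N_2N_3|$, and then show $w$ decays exponentially towards the singularity, forcing $\Sigma_\pm\to 0$ and hence $q\to 2$.

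Where you diverge from the paper is in how you obtain the decay of $w$. The paper first shows $\Omega$ decays exponentially (via $\theta_t\le-(1-\e)\theta^2$), then invokes the $F$-function from the first part of the proof together with $\Psi\ge 1$ to show $F\to F_0\neq 0$, and finally uses the auxiliary variable $Z_N=(N_1N_2N_3)^{1/3}\exp\bigl(-\int_\tau^0[2-q]\,ds\bigr)$ with $Z_N'=2Z_N$ to get $w^{1/3}\le Ce^{2\tau}$. You instead exploit $X\le 0\Rightarrow\Psi\ge 1$ immediately, combine it with the soft information $\Omega\to 0$ (built into the definition of $\mB_{\mrIX,\roap}[V]$) to get $q\ge 2(\Psi-\Omega)\ge 1$ near $-\infty$, and integrate $(\ln w)'=3q\ge 3$ directly. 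This yields the weaker rate $w\le Ce^{3\tau}$, but that suffices for everything downstream, and you avoid the $F$-function machinery entirely. Your route is self-contained and slightly cleaner for this paragraph; the paper's route reuses infrastructure already set up for the $F_0>0$ case of the first paragraph and gives a sharper initial rate on $P_N$.
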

\begin{remark}
  Due to this result, Definition~\ref{def:matter and vacuum dominated} is meaningful, and below we speak of matter and vacuum dominated developments. 
\end{remark}
\begin{proof}
  Assume, to begin with, that the conditions of the first part of the lemma are satisfied. In particular, the conditions of
  Lemma~\ref{lemma:BianchiAdevelopment} are then satisfied, so that the development is not of LRS Bianchi type VII${}_0$. Let
  \begin{equation}\label{eq:Fdefsh}
    F(\tau)=\exp\big(-\textstyle{\int}_{\tau}^{0}2[2-q(s)]ds\big)\Psi(\tau).
  \end{equation}
  Then (\ref{eq:Psieq}) yields
  \begin{equation}\label{eq:F prime id}
    F'(\tau)=4\Omega(\tau)\exp\big(-\textstyle{\int}_{\tau}^{0}2[2-q(s)]ds\big).
  \end{equation}
  In particular, $F$ is bounded from below by $0$ and is decreasing to the past. There is thus a constant
  $F_{0}\geq 0$ such that
  \begin{equation}\label{eq:Fint}
    \begin{split}
      F(\tau)-F_{0} = & \textstyle{\int}_{-\infty}^{\tau}4\Omega(s)\exp\big(-\textstyle{\int}_{s}^{0}2[2-q(u)]du\big)ds\\
      = & 18\textstyle{\int}_{-\infty}^{\tau}\tfrac{V[\phi(s)]}{\theta^{2}(0)}\exp\big(-2\textstyle{\int}_{s}^{0}[1+q(u)]du\big)
      \exp\big(-\textstyle{\int}_{s}^{0}2[2-q(u)]du\big)ds\\
      = & \tfrac{18}{\theta^{2}(0)}\textstyle{\int}_{-\infty}^{\tau}V[\phi(s)]e^{6s}ds.
    \end{split}
  \end{equation}
  Since $V[\phi(\tau)]\leq Ce^{-6\a_{V}\tau}$ for $\tau\leq 0$, cf. the beginning of the proof of Lemma~\ref{lemma:X growth}, (\ref{eq:Fint}) yields
  \begin{equation}\label{eq:Fconvest}
    |F(\tau)-F_{0}|\leq Ce^{6(1-\a_V)\tau}
  \end{equation}
  for $\tau\leq 0$, so that 
  \[
  \big|\Psi(\tau)-F_{0}\exp\big(\textstyle{\int}_{\tau}^{0}2[2-q(s)]ds\big)\big|
  \leq C\exp\big(-\textstyle{\int}_{\tau}^{0}2[1+q(s)]ds-6\a_{V}\tau\big).
  \]
  Let us first consider the case that $F_{0}=0$. Then
  \begin{equation}\label{eq:Psiasympt}
    \Psi(\tau)\leq C\exp\big(-\textstyle{\int}_{\tau}^{0}2[1+q(s)]ds-6\a_{V}\tau\big).
  \end{equation}
  In the case of anisotropic Bianchi type I and non-LRS Bianchi type II, $\a_{V}\in (0,1)$ and $q$ converges to
  $2$; see Lemma~\ref{lemma:BIBIIasympt}. Combining these observations with (\ref{eq:Psiasympt}) yields the conclusion
  that $\Psi$ converges to zero exponentially. In the remaining cases, $\theta(\tau)\geq c_\theta e^{-\tau}$, see Lemma~\ref{lemma:X growth},
  and $\a_{V}\in (0,1/3)$. This, again, yields the conclusion that $\Psi$ converges to zero exponentially. This means that $\phi$ converges as
  $\tau\rightarrow-\infty$, so that $V\circ\phi$ is bounded to the past. Going through the above estimates with this information in mind yields the
  conclusion that
  \begin{equation}\label{eq:Psi est Fz zero}
    \Psi(\tau)\leq C\exp\big(-\textstyle{\int}_{\tau}^{0}2[1+q(s)]ds\big)
  \end{equation}
  for $\tau\leq 0$. On the other hand, (\ref{eq:PsiOmegadef}) and (\ref{eq:thetaprime}) yield
  \[
  \Psi= \tfrac{3}{2}\tfrac{\phi_{t}^{2}+2V\circ\phi}{\theta^{2}}\geq\tfrac{3V_{\inf}}{\theta^{2}(0)}
  \exp\big(-2\textstyle{\int}_{\tau}^{0}[1+q(s)]ds\big),
  \]
  where we use the fact that $V$ has a non-negative infimum, $V_{\inf}$. In particular, it is thus
  natural to introduce the quantity
  \[
  P(\tau)=\exp\big(\textstyle{\int}_{\tau}^{0}2[1+q(s)]ds\big)\Psi(\tau).
  \]
  Appealing to (\ref{eq:Psieq}) yields
  \begin{equation}\label{eq:Pprime}
    P'(\tau)=\exp\big(\textstyle{\int}_{\tau}^{0}2[1+q(s)]ds\big)(-6\Psi+4\Omega)=-9\tfrac{\phi_{t}^{2}}{\theta^{2}}
    \exp\big(\textstyle{\int}_{\tau}^{0}2[1+q(s)]ds\big)=-9\tfrac{\phi_{t}^{2}}{\theta^{2}(0)}.
  \end{equation}
  In particular, $P$ increases to the past (and, by the above, we know $P$ to be bounded). Thus there
  is a $P_{0}\in\ro$ such that $P(\tau)\rightarrow P_{0}$ as $\tau\rightarrow -\infty$. Since 
  \[
  P=\tfrac{3}{2}\tfrac{\phi_{t}^{2}+2V\circ\phi}{\theta^{2}(0)},
  \]
  and since $\phi$ converges to a limit, say $\phi_{0}$, it is clear that $\phi_{t}^{2}$ has to converge to a limit. Keeping (\ref{eq:Pprime})
  in mind, it is clear that $\phi_{t}$ has to converge to zero. Consider (\ref{eq:phiddot}). We know that $\phi$ converges to a limit, say
  $\phi_{0}$. Let $C$ be a constant such that $C\geq 2|V'(\phi_{0})|+1$. Then there is a $t_{0}>0$ such that $C>|V'[\phi(t)]|+1/2$ for
  $t\leq t_{0}$. Next, we wish to prove, by contradiction, that there is a $T\in (0,t_{0})$ such that $|(\theta\phi_{t})(t)|\leq C$ for all
  $t\leq T$. Assume, therefore, that there is a sequence $t_{k}\downarrow 0$ such that $|(\theta\phi_{t})(t_{k})|>C$. Fix a $t>0$ such that
  $\phi_{t}(t)\neq 0$. Then, due to (\ref{eq:phiddot}), (\ref{eq:dtdtau}) and (\ref{eq:thetaprime}),
  \[
  \d_{t}(\theta\phi_{t})=-\left(\tfrac{4}{3}+\tfrac{2}{3}(\Sp^{2}+\Sm^{2}+\Psi)
  -\tfrac{2}{3}\Omega+\tfrac{V'\circ\phi}{\theta\phi_{t}}\right)\theta^{2}\phi_{t}
  \]
  at $t$. Fix $T$ so that $\Omega(t)\leq 1/2$ for $t\leq T$. Then, assuming $t\leq \min\{T,t_{0}\}$ and $|(\theta\phi_{t})(t)|\geq C$,
  \[
  \tfrac{2}{3}\Omega+\left|\tfrac{V'\circ\phi}{\theta\phi_{t}}\right|\leq \tfrac{1}{3}+\tfrac{C-1/2}{C}=\tfrac{4}{3}-\tfrac{1}{2C}.
  \]
  This means that if $t\leq \min\{T,t_{0}\}$ and $|(\theta\phi_{t})(t)|\geq C$, then $|\theta\phi_{t}|$ increases to the past. This
  means that $|(\theta\phi_{t})(s)|\geq C$ for all $s\in (0,t)$. Combining this conclusion with (\ref{eq:phiddot}) and the fact that
  $|V'[\phi(s)]|\leq C-1/2$ for $s\in (0,t)$ yields the conclusion that $|\phi_{t}|$ is non-zero and increasing to the past. This means
  that it converges to a non-zero value, contradicting previous conclusions. Thus $|(\theta\phi_{t})(t)|<C$ for all $t\leq \min\{T,t_{0}\}$.
  This means that
  \[
  |\phi'(\tau)|\leq C\exp\left(-\textstyle{\int}_{\tau}^{0}2[1+q(s)]ds\right)
  \]
  for all $\tau\leq 0$. Combining this estimate with (\ref{eq:Psi est Fz zero}) yields (\ref{eq:Omegaphiprstrongdec}). 

  Next, assume $F_{0}>0$. Combining (\ref{eq:Fdefsh}) and (\ref{eq:Fconvest}) then yields
  \begin{equation}\label{eq:Psi ito Fz etc}
    \Psi(\tau)=\exp\left(\textstyle{\int}_{\tau}^{0}2[2-q(s)]ds\right)\left[F_{0}+O(e^{6(1-\a_V)\tau})\right].
  \end{equation}
  If $\mfT\neq\mrIX$, then $\Psi\leq 1$ due to the constraint. If $\mfT=\mrIX$, then $X(t)>0$ for $t\leq t_0$, see 
  Lemma~\ref{lemma:BianchiAdevelopment}. Due to (\ref{eq:Xdef}) and (\ref{eq:PsiOmegadef}), this means that $\Psi\leq 1$ in case
  $\mfT=\mrIX$. Thus $\Psi(\tau)\leq 1$ for all $\tau\leq 0$. If $\mfT\neq\mrIX$, then the first factor on the right hand side of
  (\ref{eq:Psi ito Fz etc}) is monotonically increasing to the past; note that $2-q\geq 0$ if $\mfT\neq\mrIX$ due to (\ref{eq:altqdef}).
  Combining this observation with (\ref{eq:Psi ito Fz etc}) and the fact that $\Psi\leq 1$ implies that $\Psi$ converges to a strictly
  positive number, and that
  \begin{equation}\label{eq:qmtint}
    \textstyle{\int}_{-\infty}^{0}2|2-q(s)|ds<\infty.
  \end{equation}
  Next, let $\mfT=\mrIX$ and define
  \begin{align*}
    Z_N := & (N_1N_2N_3)^{1/3}\exp\big(-\textstyle{\int}_{\tau}^{0}[2-q(s)]ds\big),\\
    Z := & \frac{(N_1N_2N_3)^{1/3}}{\Psi^{1/2}}=Z_N\big[F_{0}+O(e^{6(1-\a_V)\tau})\big]^{-1/2}.
  \end{align*}
  Then $Z_N'=2Z_N$, so that $Z_N(\tau)=e^{2\tau}Z_N(0)$. This means that
  \begin{equation}\label{eq:Ni prod est}
    (N_1N_2N_3)^{1/3}\leq e^{2\tau}Z_N(0)\big[F_{0}+O(e^{6(1-\a_V)\tau})\big]^{-1/2}\leq Ce^{2\tau}
  \end{equation}
  for all $\tau\leq 0$. Combining this estimate with Lemma~\ref{lemma: bS low bd} and (\ref{eq:altqdef}) yields
  \begin{equation}\label{eq:two minus q lower bound pf}
    2-q\geq -Ce^{4\tau}
  \end{equation}
  for all $\tau\leq 0$. Combining this estimate with (\ref{eq:Psi ito Fz etc}) and the fact that $\Psi\leq 1$ yields the conclusion that
  (\ref{eq:qmtint}) holds when $\mfT=\mrIX$. 
    
  Introduce $\omega=\Psi^{1/2}$. Then
  \[
  \omega'=-\left[(2-q)-\tfrac{2\Omega}{\omega^{2}}\right]\omega.
  \]
  Assuming $\Nt\Nth\neq 0$ and defining $g=|\Nt\Nth|/\omega$ then yields
  \[
  g'=\left[2(\Sp+1)^{2}+2\Sm^{2}+2\Psi-2\Omega-\tfrac{2\Omega}{\omega^{2}}\right]g.
  \]
  Since $\Omega$ and $\Omega/\omega^{2}$ decay to zero exponentially and $\Psi$ converges to a positive number, it is clear that $g$ converges to
  zero exponentially as $\tau\rightarrow-\infty$. Since $\omega$ is bounded, this implies that $\Nt\Nth\rightarrow 0$ exponentially as
  $\tau\rightarrow-\infty$. The arguments concerning $\No\Nt$ and $\No\Nth$ are similar; these quantities both converge to zero exponentially as
  $\tau\rightarrow-\infty$. Note also that, due to (\ref{eq:altqdef}),
  \begin{equation}\label{eq:qminustwoaltex}
  2-q=2\Omega+\tfrac{3}{2}[\No^2+\Nt^2+\Nth^2-2(\No\Nt+\Nt\Nth+
    \No\Nth)].
  \end{equation}
  Since $2-q, \Omega,\No\Nt,\No\Nth,\Nt\Nth\in L^{1}((-\infty,0])$, this equality implies that $N_{i}^{2}\in L^{1}((-\infty,0])$. Moreover, due to
  (\ref{eq:constraint}) and the fact that $\No\Nt,\No\Nth,\Nt\Nth$ all converge to zero, $\Psi$, $\Sigma_\pm$ and the $N_{i}$ are bounded to the past
  (so that $N_{i}'$ is bounded to the past). Thus $N_{i}(\tau)\rightarrow 0$ as $\tau\rightarrow-\infty$, so that $q(\tau)\rightarrow 2$ as
  $\tau\rightarrow-\infty$. By the above, all quadratic polynomials in the $N_{i}$'s are integrable to the past. Combining this observation with the
  fact that  $2-q$ is integrable to the past and the equations (\ref{eq:spmp}) yields the conclusion that $\Sp$ and $\Sm$
  converge to the past. Thus there are constants $\sigma_{\pm}$, $\psi$ such that 
  \[
  \lim_{\tau\rightarrow-\infty}\Sp(\tau)=\sigma_{+},\ \ \
  \lim_{\tau\rightarrow-\infty}\Sm(\tau)=\sigma_{-},\ \ \
  \lim_{\tau\rightarrow-\infty}\Psi(\tau)=\psi.
  \]
  Moreover, $\sigma_{+}^{2}+\sigma_{-}^{2}+\psi=1$. Since the $N_{i}$ converge to zero we know that if $N_i\neq 0$, then $f_i(2,\sigma_+,\sigma_-)\geq 0$.
  The question is then if equality can hold. Assume, therefore, without loss of generality, that $\sigma_+=1/2$ and that $\No\neq 0$. Then $\Nt$,
  $\Nth$, $\No\Nt$, $\No\Nth$ and $\Omega$ converge to zero exponentially (since $\sigma_+^2+\sigma_-^2=1-\psi<1$), so that, due to (\ref{eq:Spdef})
  and (\ref{eq:altqdef}),
  \[
  2-q=\tfrac{3}{2}\No^{2}+\dots,\ \ \
  S_{+}=-\No^{2}+\dots,
  \]
  where the dots signify exponentially decaying terms. Combining this with (\ref{eq:spmp}) yields
  \[
  \Sp'=\tfrac{3}{2}\No^{2}(2-\Sp)+\dots,
  \]
  where the dots signify exponentially decreasing terms. Since there, for every $\e>0$, is a constant $c_{\e}>0$ such that
  $\No^{2}(\tau)\geq c_{\e}e^{\e\tau}$ for $\tau\leq 0$, we conclude that $\Sp'$ is strictly positive in the limit. This means that there is a
  $T\leq 0$ such that $\Sp(\tau)>1/2$ for $\tau\leq T$. Moreover, due to (\ref{eq:altqdef}), $q(\tau)$ can be assumed to be strictly less
  than $2$ for $\tau\leq T$. Combining these observations, we conclude that $q(\tau)-4\Sp(\tau)<0$ for $\tau\leq T$. This is inconsistent
  with the fact that $\No$ converges to zero. We conclude that if $\No\neq 0$, then $\sigma_{+}<1/2$; i.e., $f_1(2,\sigma_+,\sigma_-)>0$.
  By applying the symmetries, we obtain the analogous conclusions concerning the other limiting cases. Given the above information, the
  asymptotics (\ref{seq:SpmphiprNilimmatterdom}) can finally be derived. We leave the details to the reader. The first part of the
  theorem follows.

  Assume now that the conditions of Lemma~\ref{lemma:BianchiAdevelopment} are not satisfied, but that the conditions of
  Lemma~\ref{lemma:Bianchi IX remainder} are. Due to (\ref{eq:X ver HC}) and the assumptions, $\phi_\tau^2$ then converges to $6$.
  Next, due to (\ref{eq:theta t ito X}), there is, for every $\e>0$, a $T\in J$ such that $\theta_t\leq -(1-\e)\theta^2$ for all $t\leq T$,
  so that
  \[
  \theta'\leq -3(1-\e)\theta
  \]
  for $t\leq T$. For every $\e>0$, there is thus a $\tau_\e$ such that
  \[
  \theta(\tau)\geq \exp[-3(1-\e)\tau],\ \ \
  |\phi(\tau)|\leq -\sqrt{6}(1+\e)\tau
  \]
  for all $\tau\leq \tau_\e$. Due to the assumptions, this means that $\Omega$ converges to zero exponentially. Combining this observation with
  (\ref{eq:Fdefsh}), (\ref{eq:F prime id}) and the fact that $\Psi\geq 1$, it follows that
  \[
  |F'(\tau)|\leq Ce^{\eta\tau}F(\tau)
  \]
  for some $\eta>0$ and $\tau\leq \tau_\e$. In particular, $F$ converges to a non-zero number to the past. Moreover, since
  $V\circ\phi(\tau)\leq Ce^{-6\a_V(1+\e)\tau}$ in the present setting, (\ref{eq:Fconvest}) is replaced by
  \begin{equation}\label{eq:Fconvest P pos}
    |F(\tau)-F_{0}|\leq Ce^{6[1-(1+\e)\a_V]\tau}
  \end{equation}
  for all $\tau\leq\tau_\e$. This means that (\ref{eq:Psi ito Fz etc}) is replaced by
  \begin{equation}\label{eq:Psi ito Fz etc P pos}
    \Psi(\tau)=\exp\left(\textstyle{\int}_{\tau}^{0}2[2-q(s)]ds\right)\left[F_{0}+O(e^{6[1-(1+\e)\a_V]\tau})\right].
  \end{equation}
  Combining this observation with the fact that $\Psi$ converges to $1$ and an argument similar to the proof of (\ref{eq:Ni prod est})
  and (\ref{eq:two minus q lower bound pf}) yields the conclusion that the polynomial on the left hand side of (\ref{eq:constraint}) is
  bounded from below by $-Ce^{4\tau}$. Since $\Psi$ converges to $1$, this means that $\Sigma_\pm\rightarrow 0$. Returning to
  (\ref{eq:qdef}) with this information at hand yields the conclusion that $q\rightarrow 2$. Combining this information with
  (\ref{eq:Niprprel}) yields the conclusion that all the $N_i$ are decaying exponentially. Due to (\ref{eq:altqdef}), it follows that
  $2-q$ decays exponentially. At this stage, it is straightforward to deduce the remaining conclusions. We leave the details to the reader. 
\end{proof}

\chapter{The vacuum dominated setting}\label{chapter:vacuum dom set}

Due to Theorem~\ref{thm:dichotomy}, there is a dichotomy: either $\phi_t/\theta$ converges to a non-zero limit, or it converges to zero. 
This leads to the notion of vacuum and matter dominated developments; see Definition~\ref{def:matter and vacuum dominated}. In the matter dominated
case, the conclusions of Theorem~\ref{thm:dichotomy} can be used to deduce detailed asymptotics. However, it is less clear what happens in the
vacuum dominated setting. The purpose of the present chapter is to derive the asymptotics in the vacuum dominated setting for anisotropic Bianchi types
I, II, VI${}_0$ and VII${}_0$ as well as for the isotropic and LRS Bianchi types VIII and IX. In the anisotropic and non-LRS Bianchi type VIII and IX
settings, we prove that the asymptotics are oscillatory. 

\section{Vacuum dominated Bianchi types I, II, VI${}_{0}$ and VII${}_{0}$}\label{ssection:vacdomIIIVIzVIIz}

Next, let us return to the case that (\ref{eq:Omegaphiprstrongdec}) holds. We first restrict our attention to Bianchi types I and II. 

\begin{cor}\label{cor:vBtIaII}
  Consider an anisotropic development satisfying the conditions of Lemma~\ref{lemma:BianchiAdevelopment}. Assume it to be of Bianchi type I or 
  II with $\No\neq 0$, and such that (\ref{eq:Omegaphiprstrongdec}) holds. Assume, moreover, that $V\in \mfP_{\a_V}^1$, where $\a_V\in (0,1)$ in
  case of Bianchi type I and non-LRS Bianchi type II; and $\a_V\in (0,1/3)$ otherwise. Then, in the case of Bianchi type I, there is a point
  $(\sigma_{+},\sigma_{-})$ on the Kasner circle (which is non-special if the solution is not LRS) and constants $\theta_{\infty}>0$ and $\Phi_{0}$
  such that
  \begin{subequations}\label{seq:asBIv}
    \begin{align}
      |\phi_{1}(\tau)|+|\phi(\tau)-\Phi_{0}| \leq & Ce^{6\tau},\label{eq:BIphiasv}\\
      |\Sp(\tau)-\sigma_{+}|+|\Sm(\tau)-\sigma_{-}| \leq & Ce^{6\tau},\label{eq:BISpmasv}\\
      |\ln\theta(\tau)+3\tau-\ln\theta_{\infty}| \leq & Ce^{6\tau}\label{eq:BIlnthetaasv}      
    \end{align}
  \end{subequations}  
  for all $\tau\leq 0$. In the the case of Bianchi type II, there is a point $(\sigma_{+},\sigma_{-})$ on the Kasner circle with
  $\sigma_{+}<1/2$ (which is non-special if the solution is not LRS) and constants $\theta_{\infty}>0$, $m_{1}$ and $\Phi_{0}$ such that
  \begin{subequations}\label{seq:asBIIv}
    \begin{align}
      |\phi_{1}(\tau)|+|\phi(\tau)-\Phi_{0}| \leq & Ce^{6\tau},\label{eq:BIIphiasv}\\
      |\Sp(\tau)-\sigma_{+}|+|\Sm(\tau)-\sigma_{-}| \leq & Ce^{\a\tau},\label{eq:BIISpmasv}\\
      |\ln\theta(\tau)+3\tau-\ln\theta_{\infty}| \leq & Ce^{\a\tau},\label{eq:BIIlnthetaasv}\\
      |\ln|\No(\tau)|-f_1(2,\sigma_+,\sigma_-)\tau-m_{1}| \leq & Ce^{\a\tau}\label{eq:BIINzasv}
    \end{align}
  \end{subequations}  
  for all $\tau\leq 0$, where $\a:=\min\{6,2f_1(2,\sigma_+,\sigma_-)\}$. 
\end{cor}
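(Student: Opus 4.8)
\textbf{Proof proposal for Corollary~\ref{cor:vBtIaII}.}

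The plan is to feed the assumption \eqref{eq:Omegaphiprstrongdec} into the machinery already developed for the vacuum dominated setting and bootstrap the decay rates. First I would treat the Bianchi type I case. Since we are in the vacuum dominated case, \eqref{eq:Omegaphiprstrongdec} holds, and from the proof of Theorem~\ref{thm:dichotomy} (the $F_0=0$ branch) we already know that $\Psi$, and hence $\phi_1$, decays exponentially. Because $\mfT=\mrI$ the curvature polynomial in the $N_i$ vanishes identically, so $q$ is given purely by \eqref{eq:qdef} with $\Psi,\Omega$ decaying, whence $q\to 2$; in fact, combining \eqref{eq:Omegaphiprstrongdec} with the explicit relation $-\int_\tau^0 2[1+q(s)]\,ds = -6\tau + \int_\tau^0 2\Omega(s)\,ds$ and the exponential decay of $\Omega$ gives $\phi_1(\tau),\Omega(\tau)=O(e^{6\tau})$ after one clean integration (using that $V\circ\phi$ stays bounded since $\phi$ converges). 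Plugging this into \eqref{eq:SpmprBtI} shows $\Sp,\Sm$ converge exponentially at rate $6$ to a limit $(\sigma_+,\sigma_-)$ with $\sigma_+^2+\sigma_-^2=1$ (the Kasner circle, by \eqref{eq:asymptotic hamiltonian constraint} with $\Phi_1=0$); non-specialness in the non-LRS case is inherited from Lemma~\ref{lemma:BIBIIasympt}. Integrating \eqref{eq:phizev} gives $\phi\to\Phi_0$ with the stated $O(e^{6\tau})$ remainder, and integrating $\theta'/\theta=-(1+q)$ together with the $O(e^{6\tau})$ bound on $q-2$ (equivalently on $\Omega$) yields \eqref{eq:BIlnthetaasv}. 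This establishes \eqref{seq:asBIv}.

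For Bianchi type II with $\No\neq 0$ the argument is the same in spirit but the rate is now controlled by $\No$ as well. From Theorem~\ref{thm:dichotomy} (vacuum dominated branch) and Lemma~\ref{lemma:BIBIIasympt}-type reasoning one has $(\Sp,\Sm)\to(\sigma_+,\sigma_-)$ on the Kasner circle with $\sigma_+<1/2$, so $f_1(2,\sigma_+,\sigma_-)=2-4\sigma_+>0$. The key intermediate step is to show $\Omega(\tau)=O(e^{6(1-\a_V)\cdot 0}\,\cdots)$ decays at rate $\geq 6$ and $\No(\tau)=O(e^{f_1(2,\sigma_+,\sigma_-)\tau})$: the former follows from $\theta(\tau)\geq c_\theta e^{-\tau}$ (Lemma~\ref{lemma:X growth}) together with $\a_V\in(0,1/3)$, giving an exponent $2(1-3\a_V)$ — wait, more carefully, since here $\phi$ converges one gets $V\circ\phi$ bounded and $\Omega=O(e^{6\tau})$ directly; the latter follows by integrating \eqref{eq:nip} with $f_1=q-4\Sp$ and the (already established) exponential convergence $q-4\Sp\to 2-4\sigma_+$. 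Setting $\a:=\min\{6,2f_1(2,\sigma_+,\sigma_-)\}$, one then returns to \eqref{eq:spmp}--\eqref{eq:Spdef}: the driving terms $S_\pm$ are quadratic in $\No$ hence $O(e^{2f_1\tau})$, and $\Omega=O(e^{6\tau})$, so $\Sp-\sigma_+,\Sm-\sigma_-=O(e^{\a\tau})$, giving \eqref{eq:BIISpmasv}. Integrating \eqref{eq:phizev}, the $\theta$-equation, and \eqref{eq:nip} (with $f_1(2,\sigma_+,\sigma_-)\tau$ the leading term in $\ln|\No|$, remainder $O(e^{\a\tau})$) yields \eqref{eq:BIIphiasv}, \eqref{eq:BIIlnthetaasv}, \eqref{eq:BIINzasv} respectively.

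The main obstacle I expect is bookkeeping the self-consistency of the rates in the Bianchi type II case: the bound on $\Sp-\sigma_+$ feeds back into $f_1=q-4\Sp$, which controls the decay of $\No$, which in turn controls $S_\pm$ and hence $\Sp-\sigma_+$. One has to check this loop closes — i.e. that assuming $\Sp-\sigma_+=O(e^{\a\tau})$ with $\a=\min\{6,2f_1(2,\sigma_+,\sigma_-)\}$ and carrying out the integrations does not degrade the exponent — rather than merely converging. The clean way to handle it is a two-pass argument: first prove all the relevant quantities converge exponentially at \emph{some} positive rate (already essentially in Theorem~\ref{thm:dichotomy}), then, knowing the limit $(\sigma_+,\sigma_-)$ and hence $f_1(2,\sigma_+,\sigma_-)$ exactly, re-integrate the equations to extract the sharp exponent $\a$. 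The Bianchi type I case is comparatively routine since there is no curvature term and the rate is simply dictated by $\Omega=O(e^{6\tau})$. I would present the details for type II and remark that type I is the same argument with $\No\equiv 0$.
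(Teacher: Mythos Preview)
Your approach is essentially the paper's, and the two-pass bootstrap you describe for Bianchi type II is exactly how the sharp exponent $\a$ is obtained. The Bianchi type I argument is fine.

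There is, however, a genuine gap in the LRS Bianchi type II case. You invoke ``Lemma~\ref{lemma:BIBIIasympt}-type reasoning'' to conclude that $(\Sp,\Sm)\to(\sigma_+,\sigma_-)$ on the Kasner circle with $\sigma_+<1/2$, but Lemma~\ref{lemma:BIBIIasympt} explicitly assumes \emph{non-LRS} Bianchi type II: its proof uses $\Sm\neq 0$ to conclude that $\No^2$ and $\Omega$ are integrable to the past. In the LRS case $\Sm\equiv 0$, so that argument is unavailable, and you have no a priori reason why $\Sp$ even converges, let alone to a point with $\sigma_+<1/2$. The paper treats this case separately: from $\Sp'=\tfrac{3}{2}\No^2(2-\Sp)-2\Omega\Sp$ together with the exponential decay of $\Omega$ (here one uses $\a_V\in(0,1/3)$ and Lemma~\ref{lemma:X growth}) and $|\Sp|\leq 1$, one first deduces that $\No^2$ is integrable to the past, hence $\No\to 0$; then the constraint with $\Sm=0$ and $\Psi\to 0$ forces $\Sp\to\pm 1$, and $\Sp\to 1$ is excluded since it would make $|\No|$ diverge. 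So in the LRS case the limit is the special point $(-1,0)$, consistent with the parenthetical in the statement. Once you have this, your second pass goes through as written. You should isolate this case explicitly rather than folding it into the non-LRS argument.
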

\begin{proof}
  Combining the results of Lemma~\ref{lemma:BIBIIasympt} with (\ref{eq:Omegaphiprstrongdec}), it
  is clear that in the case of Bianchi type I, $q-2=O(e^{6\tau})$ and $\phi'=O(e^{6\tau})$. Combining these observations with (\ref{eq:spmp}) and
  (\ref{eq:thetaprime}) yields (\ref{seq:asBIv}). Since $\Psi(\tau)\rightarrow 0$ as $\tau\rightarrow-\infty$, it is clear that $(\sigma_+,\sigma_-)$
  belongs to the Kasner circle. Due to the conclusions of Lemma~\ref{lemma:BIBIIasympt} it is also clear that $(\sigma_+,\sigma_-)$ can only be
  a special point when the solution is LRS. 

  Similarly, in the case of non-LRS Bianchi type II, Lemma~\ref{lemma:BIBIIasympt} implies that $q-2$ decays exponentially. Combining this observation
  with (\ref{eq:Omegaphiprstrongdec}) yields the conclusion that $\phi'=O(e^{6\tau})$. Integrating this estimate yields (\ref{eq:BIIphiasv}).
  Combining these observations with (\ref{eq:BIINzas}) yields the conclusion that $q-2=O(e^{\a\tau})$, where $\a=\min\{6,4(1-2\sigma_{+})\}$, and
  that $\No^{2}=O(e^{4(1-2\sigma_{+})\tau})$. Combining these observations with the equations yields (\ref{eq:BIISpmasv})--(\ref{eq:BIINzasv}). The
  remaining conclusions of the corollary in the non-LRS Bianchi type II setting follow from the fact that $\Psi$ converges to zero and the conclusions
  of Lemma~\ref{lemma:BIBIIasympt}.

  Finally, consider the LRS Bianchi type II setting. Then
  \[
  \Sp'=\tfrac{3}{2}\No^2(2-\Sp)-2\Omega\Sp.
  \]
  Since $\Omega$ converges to zero exponentially, see (\ref{eq:Omegadec prel}), and $|\Sp|\leq 1$, it is clear that $\No^2$ is integrable to the past
  (so that $\Sp(\tau)$ converges to a limit as $\tau\rightarrow-\infty$).
  This means that $q-2$ is integrable to the past. Combining this observation with (\ref{eq:Omegaphiprstrongdec}) yields (\ref{eq:BIIphiasv}).
  Since $\No^2$ is integrable and its derivative is bounded, we conclude that $\No$ converges to zero. Since, additionally, $\Psi$ converges to zero,
  due to (\ref{eq:Omegaphiprstrongdec}), and $\Sm=0$, the constraint (\ref{eq:constraint}) implies that $\Sp$ converges to either $1$ or $-1$. However,
  it cannot converge to $1$, since this would imply that $|\No|$ tends to infinity to the past. Combining these observations with the equations yields
  the remaining conclusions. 
\end{proof}

Next, let us turn to Bianchi types VI${}_{0}$ and VII${}_{0}$.

\begin{cor}\label{cor:vBtVIzaVIIz}
  Consider a development satisfying the conditions of Lemma~\ref{lemma:BianchiAdevelopment}. Assume it to be of Bianchi type
  VI${}_{0}$ or VII${}_{0}$ with $\No=0$, and such that (\ref{eq:Omegaphiprstrongdec}) holds. Assume, in addition, that $V\in \mfP_{\a_V}^{1}$
  for some $\a_V\in (0,1/3)$.
  Then there is a point $(\sigma_{+},\sigma_{-})$ on the Kasner circle with $\sigma_{+}>1/2$ and constants $\theta_{\infty}>0$, $m_{2}$, $m_{3}$ and
  $\Phi_{0}$ such that
  \begin{subequations}\label{seq:asBVIv}
    \begin{align}
      |\phi_{1}(\tau)|+|\phi(\tau)-\Phi_{0}| \leq & Ce^{6\tau},\label{eq:BVIaVIIphiasv}\\
      |\Sp(\tau)-\sigma_{+}|+|\Sm(\tau)-\sigma_{-}| \leq & Ce^{\a\tau},\label{eq:BVIaVIISpmasv}\\
      |\ln\theta(\tau)+3\tau-\ln\theta_{\infty}| \leq & Ce^{\a\tau},\label{eq:BVIaVIIlnthetaasv}\\
      |\ln|N_i(\tau)|-f_i(2,\sigma_+,\sigma_-)\tau-m_{i}| \leq & Ce^{\a\tau}\label{eq:BVIaVIINiasv}      
    \end{align}
  \end{subequations}  
  for all $\tau\leq 0$ and $i=2,3$, where $\a:=\min\{6,2f_2(2,\sigma_+,\sigma_-),2f_3(2,\sigma_+,\sigma_-)\}$.
\end{cor}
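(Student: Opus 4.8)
The plan is to follow the same strategy that worked for Bianchi types I and II in Corollary~\ref{cor:vBtIaII}, upgrading the preliminary asymptotic information from Lemma~\ref{lemma:X growth} to sharp exponential rates using the vacuum dominance assumption (\ref{eq:Omegaphiprstrongdec}). First I would record what is already available: since the conditions of Lemma~\ref{lemma:BianchiAdevelopment} hold and the solution is of Bianchi type VI${}_0$ or VII${}_0$ with $\No=0$, Lemma~\ref{lemma:X growth} gives $\theta(\tau)\geq c_\theta e^{-\tau}$, an exponential bound $\Omega(\tau)\leq C_\Omega e^{6(1-\a_V)\tau}$, the existence of $\theta_\infty>0$ with $e^{3\tau}\theta(\tau)\to\theta_\infty$, and the limits (\ref{eq:BVIzVIIzprelas}): $\Sp\to\sigma_+>-1$, $q\to 2$, and $\Nt-\Nth\to 0$. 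What still needs to be pinned down is: the exponential rate of convergence of everything, that $(\sigma_+,\sigma_-)$ is on the Kasner circle with $\sigma_+>1/2$, and the precise logarithmic asymptotics of $\Nt$ and $\Nth$.

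The key steps, in order, are as follows. (1) From (\ref{eq:Omegaphiprstrongdec}) and the fact that $q\to 2$ (so $1+q$ is bounded below by, say, $5/2$ eventually), deduce $\Omega(\tau)+|\phi'(\tau)|\leq Ce^{6\tau}$ for $\tau\leq 0$; integrating $\phi'$ then gives (\ref{eq:BVIaVIIphiasv}) with some limit $\Phi_0$, and in particular $\Psi\to 0$, so $V\circ\phi$ is bounded and the bound $\Omega\leq Ce^{6\tau}$ can be sharpened/confirmed directly from (\ref{eq:PsiOmegadef}), (\ref{eq:thetaprime}) and $V\in\mfP_{\a_V}^1$. (2) Since $\Psi\to 0$, the Hamiltonian constraint (\ref{eq:constraint}) (with $\No=0$, written as (\ref{eq:constraint NoNthNth})) forces $\Sp^2+\Sm^2+\tfrac34(\Nt-\Nth)^2\to 1$; combined with $\Sp\to\sigma_+$ and $\Nt-\Nth\to 0$ this gives $\Sm\to\sigma_-$ with $\sigma_+^2+\sigma_-^2=1$, i.e. $(\sigma_+,\sigma_-)$ on the Kasner circle. (3) To get the sharp rates: from (\ref{eq:SpprVIzVIIz}), the evolution $\Sp'=-\tfrac32(\Nt-\Nth)^2(\Sp+1)-2\Omega\Sp$, together with $(\Nt-\Nth)^2$ integrable to the past (already shown in the proof of Lemma~\ref{lemma:X growth}) and $\Omega=O(e^{6\tau})$, I would bootstrap: integrating the $N_i$ equations (\ref{eq:nip})--(\ref{seq:fidef}) with $q-2\to 0$ shows $|N_i(\tau)|\leq Ce^{(f_i(2,\sigma_+,\sigma_-)-\e)\tau}$ for $i=2,3$, hence $(\Nt-\Nth)^2$ decays exponentially, hence via (\ref{eq:SpprVIzVIIz}) $\Sp$ converges exponentially, hence by (\ref{eq:qminustwoaltex}) (with $\No=0$) $2-q=O(e^{\a\tau})$ for a suitable $\a$; feeding this back into (\ref{eq:nip}) upgrades the $N_i$ bounds to the form $|\ln|N_i(\tau)|-f_i(2,\sigma_+,\sigma_-)\tau-m_i|\leq Ce^{\a\tau}$, which is (\ref{eq:BVIaVIINiasv}). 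Then (\ref{eq:spmp}) and (\ref{eq:thetaprime}) give (\ref{eq:BVIaVIISpmasv}) and (\ref{eq:BVIaVIIlnthetaasv}), with $\a=\min\{6,2f_2(2,\sigma_+,\sigma_-),2f_3(2,\sigma_+,\sigma_-)\}$. (4) Finally, to show $\sigma_+>1/2$: if $\sigma_+\leq 1/2$ then $f_2(2,\sigma_+,\sigma_-)=2+2\sigma_++2\sqrt3\sigma_-$ or $f_3=2+2\sigma_+-2\sqrt3\sigma_-$ — one of $\Nt,\Nth$ would have $f_i(2,\sigma_+,\sigma_-)\leq 0$, contradicting that $N_i$ converges to zero (since $\Nt\Nth>0$ both are non-zero); more carefully, since $(\sigma_+,\sigma_-)$ is on the Kasner circle and $\sigma_+>-1$, the constraint $\tfrac34(\Nt-\Nth)^2\to 1-\sigma_+^2-\sigma_-^2=0$ is automatic, and one argues as in the Bianchi type II case that $\sigma_+=1/2$ would make $\No$-type terms non-decaying; here the relevant obstruction is the growth of $\Nt$ or $\Nth$, which must be excluded exactly as in the proof of Lemma~\ref{lemma:BIBIIasympt} (the $\sigma_+\le 1/2$ exclusion step).

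I expect the main obstacle to be step (4), the strict inequality $\sigma_+>1/2$, and the careful bootstrapping in step (3) to make the rate $\a$ come out exactly as claimed without circularity. The subtlety is that the exponents $f_i(2,\sigma_+,\sigma_-)$ governing the decay of $N_i$ themselves depend on the limit $(\sigma_+,\sigma_-)$, so one must first establish convergence (with \emph{some} rate) before identifying the sharp rate, and then verify that the self-consistent rate is precisely $\min\{6,2f_2,2f_3\}$; one also has to check that $f_2(2,\sigma_+,\sigma_-)$ and $f_3(2,\sigma_+,\sigma_-)$ are both strictly positive (which is where $\sigma_+>1/2$ and $\sigma_+^2+\sigma_-^2=1$ enter: $f_2+f_3=4+4\sigma_+>6>0$, and each $f_i>0$ follows since on the Kasner circle $2+2\sigma_+\pm2\sqrt3\sigma_-\ge 0$ with equality only at the special point antipodal to $(\sigma_+,\sigma_-)$, excluded by $\sigma_+>1/2$... — this needs to be checked against whether a special point can occur, i.e. the LRS-versus-non-LRS dichotomy from Proposition~\ref{prop:limitcharsp}). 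For the non-LRS case one can invoke Proposition~\ref{prop:LRS BVIIz as char} to rule out the problematic limits; for the LRS case the argument parallels the LRS Bianchi type II analysis at the end of the proof of Corollary~\ref{cor:vBtIaII}. I would leave the routine final integrations (deriving (\ref{eq:BVIaVIISpmasv})--(\ref{eq:BVIaVIINiasv}) from the equations once all the exponential decay estimates are in hand) to the reader, as is done in the analogous earlier corollaries.
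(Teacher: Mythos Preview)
Your overall approach is correct and mirrors the paper's: invoke Lemma~\ref{lemma:X growth} for the preliminary limits, use the constraint together with $\Psi\to 0$ to place $(\sigma_+,\sigma_-)$ on the Kasner circle, pin down $\sigma_+>1/2$, and then bootstrap the equations for the sharp rates.

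There is, however, a genuine muddle in your step~(4). You argue that $f_i(2,\sigma_+,\sigma_-)\le 0$ would contradict ``$N_i$ converges to zero''. But at this stage you do \emph{not} know $N_i\to 0$ (for Bianchi~VII${}_0$ the constraint only bounds $(\Nt-\Nth)^2$, and individually the $N_i$ could in principle be unbounded). The paper's argument is the other way around: one first establishes that $\Nt,\Nth$ are \emph{bounded} to the past (for VI${}_0$ this is immediate from the constraint since $\Nt\Nth<0$; for VII${}_0$ one needs Lemma~\ref{lemma:NtNthbdBtVIIZ}), and then observes that if some $f_i(2,\sigma_+,\sigma_-)$ were $<0$, the evolution (\ref{eq:nip}) would force $|N_i|\to\infty$, contradicting boundedness. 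This gives $1+\sigma_+\pm\sqrt{3}\sigma_-\ge 0$, i.e.\ $\sigma_+\ge 1/2$ on the Kasner circle. For the strict inequality, the paper does \emph{not} use Proposition~\ref{prop:LRS BVIIz as char} (whose hypothesis $\Nt/\Nth\to 1$ is not available here); it uses Proposition~\ref{prop:limitcharsp} (and Remark~\ref{remark:limitcharsp}): if $\sigma_+=1/2$ then $(\sigma_+,\sigma_-)$ is one of the special points $(1/2,\pm\sqrt{3}/2)$, and convergence to such a point forces the solution into an LRS subspace with $N_1=N_3$ or $N_1=N_2$; since $N_1=0$, this gives $\Nt=0$ or $\Nth=0$, contradicting the Bianchi type. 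You did mention Proposition~\ref{prop:limitcharsp} parenthetically, so you had the right tool in mind, but your write-up points to the wrong proposition for the final step.
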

\begin{proof}
  Note that the conclusions of Lemma~\ref{lemma:X growth} hold in the present setting. 
  Combining (\ref{eq:constraint}) with (\ref{eq:BVIzVIIzprelas}) and (\ref{eq:Omegaphiprstrongdec}), it follows that $\Sm^{2}$ converges to
  $1-\sigma_{+}^{2}$. Since $\Sm$ is continuous, this implies that $\Sm$ converges to $\sigma_{-}$, where $(\sigma_{+},\sigma_{-})$ belongs to the
  Kasner circle. In order for this to be consistent with the fact that $\Nt$ and $\Nth$ are bounded to the past (in the case of Bianchi type
  VI${}_{0}$ this is an immediate consequence of the constraint, and in the case of Bianchi type VII${}_{0}$, it follows from
  Lemma~\ref{lemma:NtNthbdBtVIIZ} and the constraint), we have to have $1+\sigma_{+}+\sqrt{3}\sigma_{-}\geq 0$ and
  $1+\sigma_{+}-\sqrt{3}\sigma_{-}\geq 0$. Since $\sigma_{+}>-1$, this implies that $\sigma_{+}\geq 1/2$. However, assuming that $\sigma_{+}=1/2$,
  Proposition~\ref{prop:limitcharsp} and Remark~\ref{remark:limitcharsp} yield the conclusion that either $\Nt=0$ or $\Nth=0$, contradicting the
  assumptions. This means that $\sigma_{+}>1/2$. The remaining conclusions of the corollary follow by revisiting the equations using the above
  information. 
\end{proof}

\section{Vacuum dominated isotropic and LRS Bianchi types VIII and IX}\label{ssection:vacdomBVIII}

Next, we consider vacuum dominated LRS Bianchi type VIII solutions.
\begin{cor}\label{cor:LRS Bianchi type VIII}
  Consider a development satisfying the conditions of Lemma~\ref{lemma:BianchiAdevelopment} where $V\in\mfP_{\a_V}^1$ for some $\a_V\in (0,1/3)$. Assume
  it to be of LRS Bianchi type VIII with $\No<0$, $\Nt=\Nth$ and $\Sm=0$, and such that (\ref{eq:Omegaphiprstrongdec}) holds. Then there are constants
  $\theta_{\infty}>0$, $m_1$, $m_2=m_3$ and $\Phi_0$ such that
  \begin{subequations}
    \begin{align}
      |\Sp(\tau)+1| \leq & Ce^{6\tau},\label{eq:Sp limit LRS BVIII}\\
      |\ln |N_{i}(\tau)|-f_i(2,-1,0)\tau-m_i| \leq & Ce^{6\tau},\\
      |q(\tau)-2|+|\Omega(\tau)|+|\Psi(\tau)| \leq & Ce^{6\tau},\label{eq:q etc limit BVIII LRS}\\
      |\ln\theta(\tau)+3\tau-\ln\theta_{\infty}| \leq & Ce^{6\tau},\label{eq:ln theta as BVIII LRS}\\
      |\phi_{1}(\tau)|+|\phi(\tau)-\Phi_{0}| \leq & Ce^{6\tau}\label{eq:phi limit BVIII LRS}
    \end{align}
  \end{subequations}
  for all $\tau\leq 0$ and $i=1,2,3$. 
\end{cor}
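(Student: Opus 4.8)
The plan is to treat this as a direct specialisation of the vacuum dominated analysis, using the hypotheses $\No<0$, $\Nt=\Nth$, $\Sm=0$ together with the invariance of the set $\{\Sm=0,\ \Nt=\Nth\}$ (Proposition~\ref{prop:limitcharsp} and the structure of (\ref{seq:SigmaiNiprprel})) to reduce to a two-dimensional system in $(\Sp,\No,\Nt)$ modulo the constraint, and then to extract the limit $(\sigma_+,\sigma_-)=(-1,0)$ by the same exponential-decay bootstrap used in Corollaries~\ref{cor:vBtIaII} and \ref{cor:vBtVIzaVIIz}. First I would invoke the hypothesis (\ref{eq:Omegaphiprstrongdec}) together with Lemma~\ref{lemma:X growth}: since the solution is of Bianchi type VIII (not isotropic Bianchi type I), $\theta(\tau)\geq c_\theta e^{-\tau}$, and $V\in\mfP_{\a_V}^1$ with $\a_V\in(0,1/3)$ gives $\Omega(\tau)\leq C_\Omega e^{2(1-3\a_V)\tau}$; combined with (\ref{eq:Omegaphiprstrongdec}) this already shows $\Psi(\tau)\to 0$ and $\phi'(\tau)\to 0$ exponentially, hence $\phi$ converges to a limit $\Phi_0$ and $V\circ\phi$ is bounded to the past, which upgrades (\ref{eq:Omegaphiprstrongdec}) to $\Omega(\tau)+|\phi'(\tau)|\leq C e^{6\tau}$ exactly as in the first part of the proof of Theorem~\ref{thm:dichotomy} (the $F_0=0$ branch) — note that here the LRS restriction forces us into the vacuum dominated branch anyway. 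This gives (\ref{eq:phi limit BVIII LRS}) once I also integrate the bound on $\phi'$.

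Next I would pin down the Kasner limit. With $\Sm=0$, $\Nt=\Nth$, the constraint (\ref{eq:constraint NoNthNth}) reads $\Sp^2+\Psi+\tfrac34\No^2-\tfrac32\No\cdot 2\Nt=1$, i.e. $\Sp^2+\Psi+\tfrac34\No^2-3\No\Nt=1$, and $\Sp$ evolves by (\ref{eq:Sp prime LRS char}), which in this symmetric case becomes $\Sp'=\tfrac32\No^2(2-\Sp)-3\No(2\Nt)(1-2\Sp)-2\Omega\Sp$ (the $(\Nt-\Nth)^2$ term drops). The Bianchi type VIII signs ($\No<0$, $\Nt=\Nth>0$) make $\No\Nt<0$, so $-3\No(2\Nt)(1-2\Sp)$ has a definite sign structure; the key point, as in Corollary~\ref{cor:vBtVIzaVIIz}, is that $\Sp$ cannot have a limit other than $-1$ consistent with $\No,\Nt$ bounded to the past and $\Psi\to 0$. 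More directly: from $f_1=q-4\Sp$ and $f_2=f_3=q+2\Sp$ in (\ref{seq:fidef}), integrability of $q-2$ to the past (which follows from (\ref{eq:qminustwoaltex}) once all $N_i$ and $\Omega$ are shown integrable, exactly the Theorem~\ref{thm:dichotomy} argument) forces $N_1\sim e^{f_1(2,\sigma_+,\sigma_-)\tau}$ and $N_2=N_3\sim e^{f_2(2,\sigma_+,\sigma_-)\tau}$ up to exponentially small corrections, so boundedness of all $N_i$ to the past requires $f_1(2,\sigma_+,\sigma_-)\geq 0$ and $f_2(2,\sigma_+,\sigma_-)\geq 0$, i.e. $\sigma_+\leq 1/2$ and $\sigma_+\geq -1/2$... but the LRS constraint with $\Psi\to 0$ and $\Sm=0$ gives $\sigma_+^2=1$, so $\sigma_+\in\{-1,1\}$, and $\sigma_+=1$ is excluded because it would force $|N_2|=|N_3|\to\infty$ to the past. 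Hence $\sigma_+=-1$, $\sigma_-=0$, which is (\ref{eq:Sp limit LRS BVIII}) modulo the rate; the rate $e^{6\tau}$ then comes from feeding the now-exponential decay of $\Omega$, $\Psi$, and all the $N_i$ back into (\ref{eq:Sp prime LRS char}) and integrating, and (\ref{eq:q etc limit BVIII LRS}) follows from (\ref{eq:qminustwoaltex}).

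The remaining estimates are routine bootstraps: with $\sigma_+=-1$, $f_1(2,-1,0)=2-4(-1)=6$ and $f_2(2,-1,0)=f_3(2,-1,0)=2+2(-1)=0$, so $N_1=O(e^{6\tau})$ decays at rate $6$ while $N_2=N_3$ converge to nonzero constants $e^{m_2}=e^{m_3}$; integrating (\ref{eq:nip}) with $q-2=O(e^{6\tau})$ gives $|\ln|N_i(\tau)|-f_i(2,-1,0)\tau-m_i|\leq Ce^{6\tau}$ for all three $i$ (for $i=2,3$ the "$f_i\tau$" term is just $0$). Finally (\ref{eq:ln theta as BVIII LRS}) follows from $\theta'=-(1+q)\theta$ in (\ref{eq:thetaprime}), i.e. $\ln\theta(\tau)+3\tau=\ln\theta(0)+\int_\tau^0(q(s)-2)\,ds$, with the integrand $O(e^{6\tau})$. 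The main obstacle I anticipate is the same one that recurs throughout this chapter: cleanly justifying that, in the Bianchi type VIII case, the limit point on the Kasner circle must be exactly $\sigma_+=-1$ rather than merely $\sigma_+\in[-1/2,1/2]$ — one has to use the full LRS constraint $\sigma_+^2=1$ (forced by $\Psi\to 0$, $\Sm=0$) and then rule out $\sigma_+=1$ via the growth of $N_2=N_3$, being careful that the relevant $N_i$'s genuinely stay bounded to the past (for Bianchi type VIII the constraint alone does not bound $\No$, so one leans on the analogue of Lemma~\ref{lemma:NtNthbdBtVIIZ}/the sign structure $\No\Nt<0$ in (\ref{eq:constraint NoNthNth}) rather than on coerciveness of the constraint polynomial). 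Everything else is a direct transcription of the arguments in Corollaries~\ref{cor:vBtIaII} and \ref{cor:vBtVIzaVIIz} with the LRS specialisation.
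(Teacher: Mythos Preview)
Your outline has a genuine gap at the step where you claim ``integrability of $q-2$ to the past (which follows from (\ref{eq:qminustwoaltex}) once all $N_i$ and $\Omega$ are shown integrable, exactly the Theorem~\ref{thm:dichotomy} argument)''. This is circular: you need integrability of the $N_i$-terms to get integrability of $q-2$, but you never explain how to obtain the former. The argument in Theorem~\ref{thm:dichotomy} that you invoke works only in the matter-dominated branch ($F_0>0$), where $\Psi$ converges to a strictly positive limit and one can divide by it; here $\Psi\to 0$ and that mechanism is unavailable. Likewise, the ``direct transcription'' of Corollaries~\ref{cor:vBtIaII} and \ref{cor:vBtVIzaVIIz} does not work, because in those cases the convergence of $\Sp$ is supplied by Lemmas~\ref{lemma:BIBIIasympt} and \ref{lemma:X growth}, which do not cover Bianchi type VIII. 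So convergence of $\Sp$ must be proved here from scratch, and your proposal does not do that.

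The paper closes this gap with an ODE dichotomy on $\Sp$ that exploits the sign structure specific to LRS Bianchi type VIII. With $\No<0$, $\Nt=\Nth>0$, $\Sm=0$, one has $\Sp'=\tfrac{3}{2}\No^2(2-\Sp)-6\No\Nt(\tfrac{1}{2}-\Sp)-2\Omega\Sp$ (your coefficient $-6\No\Nt(1-2\Sp)$ is off by a factor of two). When $\Sp\leq 1/2$ both $N$-terms are non-negative, so $\Sp'\geq -Ce^{2\tau}$; a short bootstrap then yields the dichotomy: either $\Sp(\tau)\geq 1/2-Ce^{2\tau}$ for all $\tau\leq\tau_1$, or $\Sp(\tau)\leq s_+<1/2$ for all $\tau\leq\tau_1$. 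The first alternative forces $q-4\Sp\to$ something $\leq -3/2$, hence $|\No|\to\infty$, contradicting the constraint bound $|\No|\leq 2/\sqrt{3}$ (note: the constraint \emph{does} bound $\No$, contrary to what you write in your last paragraph; it is $\Nt$ that is not a priori bounded). In the second alternative the $\Sp'$-equation itself gives integrability of the positive quantities $\No^2$ and $-\No\Nt$, whence $\No,\No\Nt\to 0$, and then the constraint forces $\Sp\to -1$. Once this is in hand, the remaining estimates follow exactly as you describe. Two further minor slips: $f_2(2,\sigma_+,0)\geq 0$ gives $\sigma_+\geq -1$, not $\sigma_+\geq -1/2$; and if $\sigma_+=1$ it is $|\No|$, not $|\Nt|$, that would blow up.
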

\begin{proof}
  Combining Lemma~\ref{lemma:X growth} and (\ref{eq:Omegaphiprstrongdec}), it follows that $\Psi$ and $\Omega$ decay as $e^{2\tau}$. Next, note that
  (\ref{eq:spmp}), (\ref{eq:Spdef}), (\ref{eq:altqdef}) and the fact that $\Sm=0$ and $\Nt=\Nth$ yield
  \begin{equation}\label{eq:Sp prime BVIII}
    \Sp'=\tfrac{3}{2}\No^2(2-\Sp)-6\No\Nt\left(\tfrac{1}{2}-\Sp\right)-2\Omega\Sp.
  \end{equation}
  On the other hand, $|\Sp|\leq 1$ due to (\ref{eq:constraint}), so that $|2\Omega(\tau)\Sp(\tau)|\leq Ce^{2\tau}$ for all $\tau\leq 0$ and some constant
  $C>0$. Assume, for a moment, that there is a $\tau_1\leq 0$ such that $\Sp(\tau_1)\leq 1/2-Ce^{2\tau_1}$. We then wish to prove that $\Sp(\tau)\leq 1/2$
  for all $\tau\leq\tau_1$. Assume, in order to obtain a contradiction, that $\tau_2\leq\tau_1$ is such that $\Sp(\tau_2)=1/2$ and $\Sp(\tau)<1/2$ for
  $\tau_2<\tau\leq\tau_1$. Then the first two terms on the right hand side of (\ref{eq:Sp prime BVIII}) are non-negative on $[\tau_2,\tau_1]$. Thus
  $\Sp'(\tau)\geq -Ce^{2\tau}$ for all $\tau\in [\tau_2,\tau_1]$. Integrating this estimate yields
  \[
    \Sp(\tau_2)\leq \Sp(\tau_1)+\tfrac{C}{2}e^{2\tau_1}\leq \tfrac{1}{2}-\tfrac{C}{2}e^{2\tau_1}<\tfrac{1}{2},
  \]
  a contradiction. In other words, there are two possibilities. Either there is a $\tau_1$ such that $\Sp(\tau)\geq 1/2-Ce^{2\tau}$ for all $\tau\leq\tau_1$
  or there is a $\tau_1$ such that $\Sp(\tau)\leq s_+$ for all $\tau\leq \tau_1$ and some $s_+<1/2$. Let us consider the first case. Then
  \[
    q-4\Sp=2\Sp^2-4\Sp+2\Psi-2\Omega.
  \]
  Note that for $\Sp\in [1/2,1]$, $2\Sp^2-4\Sp\leq -3/2$. This means that $q-4\Sp$ is, in the limit as $\tau\rightarrow-\infty$, bounded from above by
  $-3/2$. This means that $|\No|$ tends to infinity exponentially. On the other hand, $|\No|\leq 2/\sqrt{3}$ due to (\ref{eq:constraint}). This contradiction
  leads to the conclusion that there is a $\tau_1$ such that $\Sp(\tau)\leq s_+$ for all $\tau\leq \tau_1$ and some $s_+<1/2$. Combining this observation
  with (\ref{eq:Sp prime BVIII}) leads to the conclusion that $\No^2$ and $\No\Nt$ are integrable to the past; note that $\Sp$ is bounded, the last term on
  the right hand side of (\ref{eq:Sp prime BVIII}) is integrable, $2-\Sp(\tau)\geq 2-s_+$ for all $\tau\leq\tau_1$ and $1/2-\Sp(\tau)\geq 1/2-s_+$ for
  all $\tau\leq\tau_1$. Since $\No$, $\No'$ and $(\No\Nt)'$ are bounded to the past, we conclude that $\No$ and $\No\Nt$ converge to zero. Combining this observation
  with (\ref{eq:constraint}) yields the conclusion that $\Sp$ converges to $-1$, so that $q$ converges to $2$. This means that $\No$ and $\No\Nt$ converge to zero
  exponentially, so that $q-2$ converges to zero exponentially. This means that $|\phi'(\tau)|\leq Ce^{6\tau}$, so that (\ref{eq:phi limit BVIII LRS}) follows.
  Note also that $\Sp$ converges exponentially to $-1$. Thus $|\No(\tau)|\leq Ce^{6\tau}$ and $|(\No\Nt)(\tau)|\leq Ce^{6\tau}$ for all $\tau\leq 0$. Moreover,
  $\Omega(\tau)\leq Ce^{6\tau}$ for all $\tau\leq 0$. This means that $|q(\tau)-2|\leq Ce^{6\tau}$ for all $\tau\leq 0$. Thus (\ref{eq:q etc limit BVIII LRS})
  and thereby (\ref{eq:ln theta as BVIII LRS}) hold. Combining the above also yields (\ref{eq:Sp limit LRS BVIII}). Combining the above with the equations
  for the $N_i$ yields the remaining conclusions. 
\end{proof}

In the vacuum dominated isotropic or LRS Bianchi type IX setting, the following holds. 
\begin{cor}\label{cor:LRS Bianchi type IX}
  Consider a development satisfying the conditions of Lemma~\ref{lemma:BianchiAdevelopment} where $V\in\mfP_{\a_V}^1$ for some $\a_V\in (0,1/3)$. Assume it to be
  of either isotropic or LRS Bianchi type IX with $\Nt=\Nth$ and $\Sm=0$, and such that (\ref{eq:Omegaphiprstrongdec}) holds. Then there are constants
  $\theta_{\infty}>0$, $m_1$, $m_2=m_3$ and $\Phi_0$ such that
  \begin{subequations}
    \begin{align}
      |\Sp(\tau)+1| \leq & Ce^{6\tau},\label{eq:Sp limit LRS BIX}\\
      |\ln N_{i}(\tau)-f_i(2,-1,0)\tau-m_i| \leq & Ce^{6\tau},\\
      |q(\tau)-2|+|\Omega(\tau)|+|\Psi(\tau)| \leq & Ce^{6\tau},\label{eq:q etc limit BIX LRS}\\
      |\ln\theta(\tau)+3\tau-\ln\theta_{\infty}| \leq & Ce^{6\tau},\label{eq:ln theta as BIX LRS}\\
      |\phi_{1}(\tau)|+|\phi(\tau)-\Phi_{0}| \leq & Ce^{6\tau}\label{eq:phi limit BIX LRS}
    \end{align}
  \end{subequations}
  for all $\tau\leq 0$ and $i=1,2,3$. 
\end{cor}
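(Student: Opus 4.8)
The plan is to adapt the argument used in the proof of Corollary~\ref{cor:LRS Bianchi type VIII} almost verbatim, replacing the Bianchi type VIII commutator structure ($\No<0$, $\Nt=\Nth>0$) by the Bianchi type IX structure (all $N_i>0$, $\Nt=\Nth$, $\Sm=0$). As in that proof, the first step is to record, via Lemma~\ref{lemma:X growth} and the assumption that (\ref{eq:Omegaphiprstrongdec}) holds, that $\Psi$ and $\Omega$ decay like $e^{2\tau}$; note that the hypotheses of Lemma~\ref{lemma:X growth} are met since the solution is not an isotropic Bianchi type I development. Next I would write down the evolution equation for $\Sp$ in this invariant subspace. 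Using (\ref{eq:spmp}), (\ref{eq:Spdef}), (\ref{eq:altqdef}) together with $\Sm=0$, $\Nt=\Nth$, one gets exactly an equation of the form
\begin{equation}\label{eq:Sp prime BIX proposal}
  \Sp'=\tfrac{3}{2}\No^2(2-\Sp)-6\No\Nt\left(\tfrac{1}{2}-\Sp\right)-2\Omega\Sp,
\end{equation}
i.e.\ the same expression as (\ref{eq:Sp prime BVIII}); one should double-check the sign of the middle term given that all $N_i$ are now positive, but it should come out the same since only the product $\No\Nt$ and the differences of $N_i$ enter.

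The second step is the dichotomy argument on $\Sp$. Since $|\Sp|\leq 1$ by (\ref{eq:constraint}), the term $|2\Omega\Sp|\leq Ce^{2\tau}$. One shows, exactly as in the proof of Corollary~\ref{cor:LRS Bianchi type VIII}, that either $\Sp(\tau)\geq 1/2-Ce^{2\tau}$ for all sufficiently negative $\tau$, or there is a $\tau_1$ and an $s_+<1/2$ with $\Sp(\tau)\leq s_+$ for all $\tau\leq\tau_1$. The first alternative is then excluded: when $\Sp\in[1/2,1]$ one has $2\Sp^2-4\Sp\leq-3/2$, so $q-4\Sp=2\Sp^2-4\Sp+2\Psi-2\Omega$ is bounded above by a negative constant in the limit, forcing $|\No|=|N_1|\to\infty$ exponentially via (\ref{eq:Niprprel}); but $N_1\leq 2/\sqrt{3}$ by the Hamiltonian constraint (\ref{eq:constraint}), a contradiction. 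So the second alternative holds.

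The third step is the bootstrap to exponential decay. With $2-\Sp\geq 2-s_+$ and $1/2-\Sp\geq 1/2-s_+$ bounded below, (\ref{eq:Sp prime BIX proposal}) and integrability of $\Omega$ give that $\No^2$ and $\No\Nt$ are integrable to the past; boundedness of their derivatives (from (\ref{eq:constraint}) and the evolution equations) forces $\No\to0$ and $\No\Nt\to0$. Plugging into (\ref{eq:constraint}) with $\Sm=0$, $\Psi\to0$ gives $\Sp\to-1$, hence $q\to2$ by (\ref{eq:altqdef}); then $\No$, $\No\Nt$ and $2-q$ all decay exponentially, indeed at rate $e^{6\tau}$ once one feeds this back into the equations (using $V\in\mfP_{\a_V}^1$ and $\theta\geq c_\theta e^{-\tau}$ from Lemma~\ref{lemma:X growth} to control $\Omega$ and $V'\circ\phi/\theta^2$ by $e^{2(1-3\a_V)\tau}$, which is $o(e^{6\tau})$ only if... — here one actually gets the $e^{6\tau}$ rate because $\Psi,\Omega=O(e^{2\tau})$ combine with $q-2=O(e^{6\tau})$; care is needed, but the Bianchi VIII proof already navigates exactly this). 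From $|\phi'(\tau)|\leq Ce^{6\tau}$ (via (\ref{eq:Omegaphiprstrongdec}) and $q\to2$) one integrates to get (\ref{eq:phi limit BIX LRS}), and from $\theta'=-(1+q)\theta$ with $q-2=O(e^{6\tau})$ one gets (\ref{eq:ln theta as BIX LRS}); (\ref{eq:Sp limit LRS BIX}), (\ref{eq:q etc limit BIX LRS}) and the $N_i$ estimates follow by integrating (\ref{eq:Niprprel}) with the now-known exponential rates, noting that in Bianchi type IX the $N_i$ are positive so one writes $\ln N_i$ rather than $\ln|N_i|$.

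The only genuine subtlety — and the place I would be most careful — is verifying that the fixed point of the flow on this LRS subspace to which the solution converges really is the Taub point $(\Sp,\Sm)=(-1,0)$ and not some other configuration, and in particular that the alternative $\Sp\to 1/2$ is correctly excluded using the constraint bound $N_1\leq 2/\sqrt 3$; in Bianchi type VIII the analogous bound is $|\No|\leq 2/\sqrt3$, so one must confirm the Bianchi IX Hamiltonian constraint yields the same quantitative bound on $N_1$ in the $\Sm=0,\Nt=\Nth$ slice. Everything else is a line-by-line transcription of the proof of Corollary~\ref{cor:LRS Bianchi type VIII}, so I would simply write ``The proof is essentially identical to that of Corollary~\ref{cor:LRS Bianchi type VIII}, with $\No<0$ replaced by $\No>0$ and $\ln|N_i|$ replaced by $\ln N_i$; we indicate the points at which the argument differs'' and then present steps two and three above.
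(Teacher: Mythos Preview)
Your proposal has a genuine gap, rooted in the one place you yourself flag as delicate. In Bianchi type VIII one has $\No<0$, $\Nt>0$, so $\No\Nt<0$; this sign is what makes both pillars of the Corollary~\ref{cor:LRS Bianchi type VIII} argument work. First, in the dichotomy step, the term $-6\No\Nt(1/2-\Sp)$ in (\ref{eq:Sp prime BVIII}) is \emph{non-negative} for $\Sp\leq 1/2$, so that $\Sp'\geq -Ce^{2\tau}$ there; in Bianchi type IX with all $N_i>0$ this term has the opposite sign and the barrier argument collapses. Second, in the LRS slice the constraint reads $\Sp^2+\Psi+\tfrac{3}{4}(\No^2-4\No\Nt)=1$; in type VIII the cross term $-3\No\Nt$ is positive and one gets $\No^2\leq 4/3$, but in type IX it is negative and no bound on $\No$ follows. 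So neither the trapping of $\Sp$ below $1/2$ nor the exclusion of the alternative via $\No\leq 2/\sqrt{3}$ survives the sign change.

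The paper does not attempt a transcription. Instead it uses the constraint to \emph{eliminate} $\No^2$ from the evolution equation for $\Sp$, obtaining
\[
  \Sp'=(1-\Sp^2)(4-2\Sp)-\Psi(4-2\Sp)-2\Omega\Sp+9\No\Nt.
\]
Now the first term is strictly positive on any compact subinterval of $(-1,1)$ and the last term $9\No\Nt$ is positive in type IX; since $\Psi,\Omega=O(e^{2\tau})$, this forces $\Sp$ to leave every interval $[-1+\e,1-\e]$ going backward, so $\Sp\to\pm 1$. The alternative $\Sp\to 1$ is then excluded by a different mechanism: $q-4\Sp\to -2$ makes $\No\to\infty$ while $\No\Nt\to 0$, so $\No^2-4\No\Nt\to\infty$, contradicting the constraint. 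From $\Sp\to -1$ the bootstrap to $e^{6\tau}$ rates proceeds as you outline in your third step.
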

\begin{proof}
  Due to (\ref{eq:qdef}), (\ref{eq:spmp}) and (\ref{eq:constraint}),
  \begin{equation}\label{eq:Sp prime}
    \Sp'=(1-\Sp^2)(4-2\Sp)-\Psi(4-2\Sp)-2\Omega\Sp+9\No\Nt.
  \end{equation}
  Note that there is a constant $C$ such that $\Psi+\Omega\leq Ce^{2\tau}$ for all $\tau\leq\tau_0$; this follows from (\ref{eq:qdef}),
  (\ref{eq:Omegaphiprstrongdec}) and the Lemma~\ref{lemma:X growth}. Given $0<\e<1$, there is a $c_\e>0$ such that if
  $\Sp\in [-1+\e,1-\e]$, then the first term on the right hand side of (\ref{eq:Sp prime}) is bounded from below by $c_\e$. Let $T\leq\tau_0$ be such
  that the sum of the absolute values of the second and third terms on the right hand side of (\ref{eq:Sp prime}) is bounded from above by $c_\e/2$
  for $\tau\leq T$, assuming $\Sp(\tau)\in [-1+\e,1-\e]$. This means that if $\Sp(\tau)\in [-1+\e,1-\e]$ and $\tau\leq T$, then $\Sp'(\tau)\geq c_\e/2$.
  This means that $\Sp$ decreases to the past until it leaves $[-1+\e,1-\e]$. Moreover, it cannot return to this interval at a later time. To conclude,
  given $\e>0$, there are two possibilities. Either there is a time $T\leq \tau_0$ such that $\Sp(\tau)<-1+\e$ for all $\tau\leq T$ or a time
  $T_1\leq\tau_0$ such that $\Sp(\tau)> 1-\e$ for all $\tau\leq T_1$. In other words, either $\Sp$ converges to $-1$ or it converges to $1$. This
  means that $q$ converges to $2$. Assume that $\Sp$ converge to $1$. Then $\No$ tends to infinity exponentially and $\Nt$ tends to zero exponentially
  as $\tau\rightarrow-\infty$. This means that the polynomial on the left hand side of (\ref{eq:constraint}) tends to infinity, contradicting
  (\ref{eq:constraint}). The only possibility that remains is that $\Sp(\tau)\rightarrow-1$ as $\tau\rightarrow-\infty$. This means that $\No$ and
  $\No\Nt$ converge to zero exponentially. Since $\Sm=0$ and $\Psi$ converges to zero exponentially, this means that $\Sp+1$ decays exponentially,
  see (\ref{eq:constraint}),
  so that $q-2$ decays exponentially. Combining this observation with (\ref{eq:Omegaphiprstrongdec}), it is clear that $\Omega=O(e^{6\tau})$ and
  $\phi'=O(e^{6\tau})$. Moreover, $\No^2=O(e^{12\tau})$ and $\No\Nt=O(e^{6\tau})$. Returning to (\ref{eq:constraint}) and (\ref{eq:altqdef}) with this
  information at hand yields (\ref{eq:Sp limit LRS BIX}) and (\ref{eq:q etc limit BIX LRS}). Given this information, the remaining conclusions of
  the lemma can be deduced.
\end{proof}

\section{The generic Bianchi type VIII and IX settings}

Our next goal is to prove that if (\ref{eq:Omegaphiprstrongdec}) holds in the case of a Bianchi type VIII or IX solution which is neither isotropic
nor LRS, then the solution is oscillatory. To prove this, it is natural to return to the arguments presented in \cite{cbu}. Due to
the presence of the non-linear scalar field, complications arise. Nevertheless, suitable modifications of the arguments presented in \cite{cbu} yield
the desired conclusion.

\begin{lemma}\label{lemma:BVIIIcontrNiconcl}
  Consider a development satisfying the conditions of Lemma~\ref{lemma:BianchiAdevelopment}. Assume it to either be of Bianchi type VIII,
  with $\No<0$, $\Nt>0$ and $\Nth>0$; or of Bianchi type IX, with $N_i>0$ for all $i$. Assume, finally, that $V\in\mfP_{\a_V}^{1}$, where
  $\a_V\in (0,1/3)$ and that (\ref{eq:Omegaphiprstrongdec}) holds.  Then the set of $(\Sp,\Sm)(\tau)$ for $\tau\leq 0$ is bounded.
  Next, assume that $\|N(\tau)\|\rightarrow\infty$ as $\tau\rightarrow-\infty$.
  Then, up to a permutation of the variables in the case of Bianchi type IX, $\Nt(\tau)$, $\Nth(\tau)\rightarrow \infty$ and $\No(\tau)$,
  $\No(\tau)\Nt(\tau)$, $\No(\tau)\Nth(\tau)\rightarrow 0$ as $\tau\rightarrow-\infty$.
\end{lemma}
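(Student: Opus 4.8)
The plan is to adapt the Bianchi type VIII/IX vacuum analysis of \cite{cbu} to the present non-linear scalar field setting, using the hypothesis (\ref{eq:Omegaphiprstrongdec}) to control the extra matter terms. First I would establish boundedness of $(\Sp,\Sm)$. From (\ref{eq:constraint}), since $\Psi\geq 0$ we have $\Sp^2+\Sm^2\leq 1-\tfrac{3}{4}[\No^2+\Nt^2+\Nth^2-2(\No\Nt+\Nt\Nth+\No\Nth)]$. In Bianchi type IX the polynomial in the $N_i$ is not sign-definite, so boundedness of $(\Sp,\Sm)$ is not immediate from the constraint alone; one has to argue, as in \cite{cbu}, that $f:=N_1N_2N_3$ satisfies $f'=2f\cdot(\text{something})$ — more precisely, combining (\ref{eq:nip}) and (\ref{seq:fidef}) one gets $(N_1N_2N_3)'=2q N_1N_2N_3$, hence with $q\in[-1,2]$ one controls $f$, and then Lemma~\ref{lemma: bS low bd} gives a lower bound on the polynomial in the $N_i$, hence an upper bound on $\Sp^2+\Sm^2$. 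In Bianchi type VIII the polynomial is bounded below (using Lemma~\ref{lemma:NtNthbdBtVIIZ}-type arguments for the product of the positive $N_i$'s together with the constraint), so boundedness of $(\Sp,\Sm)$ follows more directly.

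Next, assume $\|N(\tau)\|\to\infty$ as $\tau\to-\infty$. The idea is to track which of the $N_i$ can be large. Using (\ref{eq:nip}) with (\ref{seq:fidef}), $(\ln|N_i|)'=f_i(q,\Sp,\Sm)$, and the $f_i$ sum to $3q+0=3q$ (since $f_1+f_2+f_3=3q$). Since $(\Sp,\Sm)$ is bounded and $q\in[-1,2]$, the growth rates $f_i$ are bounded. The key algebraic fact, exactly as in the vacuum case treated in \cite{cbu}, is that at most two of the $N_i$ can grow: if two of them, say $\Nt,\Nth$, both tend to infinity, then from the constraint $\tfrac34(\Nt-\Nth)^2-\tfrac32\No(\Nt+\Nth)+\dots\leq 1$ together with signs of the $N_i$, one deduces $\No\to 0$ and $\No\Nt,\No\Nth\to 0$ (this is where the difference between VIII, with $\No<0$, and IX, with all $N_i>0$, enters: in both cases the cross terms $\No\Nt,\No\Nth$ must decay to keep the constraint bounded). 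That the remaining possibilities — only one $N_i$ large, or all three large — are excluded is again a constraint argument: all three large contradicts (\ref{eq:constraint}) since the polynomial would blow up; only one large, say $\No$, forces $\No^2$ in the constraint to blow up unless it is compensated, which it cannot be, since the other terms are then bounded. The role of (\ref{eq:Omegaphiprstrongdec}) here is to guarantee that $\Omega$ and $\Psi$ decay (at least stay bounded and, in fact, go to zero exponentially relative to the expansion factor), so that the $\Psi$ term in the constraint and the $\Omega$ term in (\ref{eq:altqdef}) do not interfere with these asymptotic bookkeeping arguments; this is the essential new ingredient compared to the pure vacuum case of \cite{cbu}.

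The main obstacle I expect is the step showing that $\Nt,\Nth\to\infty$ (rather than merely $\limsup=\infty$), i.e., upgrading the conclusion that $\|N\|$ is unbounded along a sequence to genuine divergence of the two relevant components, and simultaneously pinning down which pair it is up to permutation in the Bianchi type IX case. In \cite{cbu} this is handled by a careful monotonicity/trapping argument near the Kasner circle and a Dulac-type analysis of the flow; one has to check that the perturbing terms coming from $\Omega$ and $\Psi$, controlled via (\ref{eq:Omegaphiprstrongdec}), do not destroy the relevant trapping regions. Concretely, I would fix a small $\e>0$, use boundedness of $(\Sp,\Sm)$ and the exponential decay of $\Omega,\Psi$ to find a $\tau_0$ so that for $\tau\leq\tau_0$ the equations (\ref{seq:NoSmprime}) are an exponentially small perturbation of the vacuum Bianchi VIII/IX equations, and then invoke the corresponding vacuum structure: the relevant invariant/trapping sets persist under exponentially small perturbations. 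Once one knows $\No\to 0$ and $\No\Nt,\No\Nth\to 0$, the claim that $\Nt,\Nth\to\infty$ follows because $\|N(\tau)\|\to\infty$ and $\No$ is the only other component. I would then write the argument dividing into the cases $\mfT=\mrVIII$ (where the sign pattern $\No<0$, $\Nt,\Nth>0$ is fixed, so no permutation is needed) and $\mfT=\mrIX$ (where all $N_i>0$ are symmetric, so the conclusion holds after relabelling), presenting the constraint computations in display form and leaving the routine ODE estimates to the reader, in the style of the preceding corollaries.
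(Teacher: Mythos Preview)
Your approach matches the paper's: bound $|\No\Nt\Nth|$ via $(\No\Nt\Nth)'=3q\,\No\Nt\Nth$ (note $f_1+f_2+f_3=3q$, not $2q$) together with the integrability of $\Omega$ from Lemma~\ref{lemma:X growth}, and then defer to \cite[Lemmas~6, 7 and 15]{cbu} for the remaining conclusions. Two small corrections: in Bianchi type IX you cannot assume $q\in[-1,2]$ a priori since the constraint polynomial can be negative there, but only the lower bound $q\geq -2\Omega$ is actually needed to keep the product bounded as $\tau\to-\infty$; and Lemma~\ref{lemma:NtNthbdBtVIIZ} is a Bianchi VII${}_0$ statement, whereas in Bianchi type VIII the polynomial $\No^2+(\Nt-\Nth)^2-2\No(\Nt+\Nth)$ is manifestly non-negative from $\No<0$, so $\Sp^2+\Sm^2\leq 1$ follows directly from the constraint.
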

\begin{remark}
  We here use the notation $\|N\|:=(\No^{2}+\Nt^{2}+\Nth^{2})^{1/2}$.
\end{remark}
\begin{proof}
  Note that $(\No\Nt\Nth)'=3q(\No\Nt\Nth)$. Since $\Omega$ is integrable to the past, see Lemma~\ref{lemma:X growth}, this means that $|\No\Nt\Nth|$
  is bounded to the past; say $|\No\Nt\Nth|(\tau)\leq C$ for all $\tau\leq 0$ and some $C\geq 1$. Given this observation, the arguments presented in
  \cite[Lemmas~6, 7 and 15]{cbu} yield the desired conclusion. 
\end{proof}
Next, we wish to generalise \cite[Lemma~16, p.~724]{cbu} to the present setting.

\begin{lemma}\label{lemma:bounding Sp BIX contr arg}
  Consider a Bianchi type IX development satisfying the conditions of Lemma~\ref{lemma:BianchiAdevelopment} with $N_i>0$. Assume that $V\in\mfP_{\a_V}^{1}$,
  where $\a_V\in (0,1/3)$, and that (\ref{eq:Omegaphiprstrongdec}) holds. If $\tau_1\leq 0$, 
  \[
  -1+\tfrac{C_\Omega}{1-3\a_V}e^{2(1-\a_V)\tau_1}<\Sp(\tau_1)<\tfrac{1}{3},
  \]
  where $C_\Omega$ is the constant appearing in (\ref{eq:Omegadec prel}), and $9\No<\Nt+\Nth$ in $[\tau_2,\tau_1]$, then
  \begin{equation}\label{eq:Sptau lb}
    \Sp(\tau_1)-\tfrac{C_\Omega}{1-3\a_V}e^{2(1-\a_V)\tau_1}\leq \Sp(\tau)
  \end{equation}
  for all $\tau\in [\tau_2,\tau_1]$. 
\end{lemma}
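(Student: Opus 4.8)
The plan is to differentiate $\Sp$ and control its evolution from below on the interval $[\tau_2,\tau_1]$ using the explicit formula
\[
\Sp'=(1-\Sp^2)(4-2\Sp)-\Psi(4-2\Sp)-2\Omega\Sp+9\No\Nt+9\No\Nth-\tfrac{3}{2}(\Nt-\Nth)^2(\Sp+1)-\tfrac{3}{2}\No^2(1-2\Sp)
\]
obtained from (\ref{eq:qdef}), (\ref{eq:spmp}), (\ref{eq:Spdef}) and (\ref{eq:constraint}); in the Bianchi type IX case one rewrites the curvature terms so that the role of the hypothesis $9\No<\Nt+\Nth$ becomes visible. First I would isolate the terms that could make $\Sp$ decrease to the past and show that, under the running hypotheses, they are controlled. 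The terms $9\No(\Nt+\Nth)$ and $-\tfrac32(\Nt-\Nth)^2(\Sp+1)$ combine (using $9\No<\Nt+\Nth$ and $\Sp+1\geq 0$ once we know $\Sp>-1$ on the interval) into something non-negative, or at worst dominated by the positive curvature contribution; the essential point is the elementary inequality that on the relevant range of $\Sp$ the polynomial terms $(1-\Sp^2)(4-2\Sp)-\Psi(4-2\Sp)-\tfrac32\No^2(1-2\Sp)$ together with the curvature cross terms are $\geq 0$ when $\Sp<1/3$. The only genuinely negative contribution that survives is $-2\Omega\Sp$, and since $|\Sp|\leq 1$ by (\ref{eq:constraint}) we get $\Sp'\geq -2\Omega\geq -2C_\Omega e^{2(1-3\a_V)\tau}$ on $[\tau_2,\tau_1]$, invoking (\ref{eq:Omegadec prel}) from Lemma~\ref{lemma:X growth}.

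The second step is a continuity/bootstrap argument to justify that $\Sp<1/3$ (and $\Sp>-1$) persists on all of $[\tau_2,\tau_1]$, so that the differential inequality is actually valid there. Suppose not; let $\tau_*$ be the supremum of times $\tau\leq\tau_1$ at which $\Sp(\tau)=1/3$. On $(\tau_*,\tau_1]$ we have $\Sp<1/3$, hence the bound $\Sp'\geq -2C_\Omega e^{2(1-3\a_V)\tau}$ holds there; integrating from $\tau$ to $\tau_1$ gives
\[
\Sp(\tau_1)-\Sp(\tau)\geq -\textstyle{\int}_\tau^{\tau_1}2C_\Omega e^{2(1-3\a_V)s}\,ds\geq -\tfrac{C_\Omega}{1-3\a_V}e^{2(1-3\a_V)\tau_1},
\]
so $\Sp(\tau)\leq \Sp(\tau_1)+\tfrac{C_\Omega}{1-3\a_V}e^{2(1-3\a_V)\tau_1}<1/3$ by the hypothesis on $\Sp(\tau_1)$, contradicting $\Sp(\tau_*)=1/3$. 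Hence $\Sp<1/3$ throughout $[\tau_2,\tau_1]$, and the same integration now yields exactly
\[
\Sp(\tau)\geq \Sp(\tau_1)-\tfrac{C_\Omega}{1-3\a_V}e^{2(1-3\a_V)\tau_1}
\]
for all $\tau\in[\tau_2,\tau_1]$, which is (\ref{eq:Sptau lb}) (note the exponent in the statement should be read consistently with the exponential rate $2(1-3\a_V)$ of Lemma~\ref{lemma:X growth}; if one only uses the cruder $\Omega\leq C_\Omega e^{2(1-\a_V)\tau}$ the same argument goes through with that rate). One also records $\Sp>-1$ on the interval as a byproduct, which is what licenses dropping the $-\tfrac32(\Nt-\Nth)^2(\Sp+1)$ term favourably.

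I expect the main obstacle to be the bookkeeping in the first step: verifying that the sign-indefinite curvature terms in $\Sp'$, after substituting $q$ and the constraint, really do assemble into a non-negative quantity on the range $\Sp\in(-1,1/3)$ given the hypothesis $9\No<\Nt+\Nth$. This is precisely the point of \cite[Lemma~16, p.~724]{cbu}, and the non-linear scalar field only adds the benign $-2\Omega\Sp$ term plus the exponentially small $\Psi$ term; so the strategy is to reproduce the Chru\'sciel--Rendall--type estimate verbatim on the curvature part and then absorb the matter contributions into the exponential error. A minor additional care is needed to ensure $4-2\Sp>0$ and $1-2\Sp>0$ type factors have the right sign on the interval, which is immediate from $\Sp<1/3<1/2$.
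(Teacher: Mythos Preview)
The central sign is inverted. To obtain the lower bound \eqref{eq:Sptau lb} for $\tau<\tau_1$ you need an \emph{upper} bound on $\Sp'$, so that
\[
\Sp(\tau_1)-\Sp(\tau)=\int_\tau^{\tau_1}\Sp'(s)\,ds < \int_\tau^{\tau_1}2\Omega(s)\,ds,
\]
which then gives $\Sp(\tau)>\Sp(\tau_1)-\tfrac{C_\Omega}{1-3\a_V}e^{2(1-3\a_V)\tau_1}$. Your inequality $\Sp'\geq -2\Omega$ yields only $\Sp(\tau)\leq\Sp(\tau_1)+\tfrac{C_\Omega}{1-3\a_V}e^{2(1-3\a_V)\tau_1}$; the claim that ``the same integration now yields'' the opposite bound is simply false. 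Correspondingly, the role of $9\No<\Nt+\Nth$ is to make a curvature term \emph{negative}, not non-negative: the paper rewrites, from \eqref{eq:spmp} and \eqref{eq:altqdef},
\[
\Sp'=(q-2)(\Sp+1)+\tfrac{9}{2}\No(\No-\Nt-\Nth)+2\Omega,
\]
bounds $q-2\leq 3\No(\Nt+\Nth)$, and for $-1<\Sp<1/3$ obtains $\Sp'<\tfrac{1}{2}\No(9\No-\Nt-\Nth)+2\Omega<2\Omega$. Your expansion of $\Sp'$ is also incorrect (compare with \eqref{eq:Sp prime LRS char}; there is no $(1-\Sp^2)(4-2\Sp)$ term in the general non-LRS case, and the $\No\Nt,\No\Nth$ coefficients are wrong), and your bootstrap for $\Sp<1/3$ does not close either: from $\Sp(\tau)\leq\Sp(\tau_1)+E$ you cannot conclude $\Sp(\tau)<1/3$ using only $\Sp(\tau_1)<1/3$.

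The paper avoids a separate $\Sp<1/3$ bootstrap altogether. It argues by contradiction: if \eqref{eq:Sptau lb} fails, there is a subinterval $[\tau_b,\tau_a]\subset[\tau_2,\tau_1]$ on which $-1<\Sp\leq\Sp(\tau_1)<1/3$, with $\Sp(\tau_a)=\Sp(\tau_1)$ and $\Sp(\tau_b)<\Sp(\tau_1)-E$; on that subinterval the upper bound $\Sp'<2\Omega$ is valid, and integration gives $\Sp(\tau_a)-\Sp(\tau_b)<\int_{\tau_b}^{\tau_a}2\Omega\leq E$, contradicting $\Sp(\tau_a)-\Sp(\tau_b)>E$. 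Your observation about the exponent ($2(1-3\a_V)$ versus $2(1-\a_V)$) is correct.
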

\begin{proof}
  Combining (\ref{eq:spmp}) and (\ref{eq:altqdef}) yields $q-2\leq 3\No(\Nt+\Nth)$ and 
  \begin{equation}\label{eq:Sp prime BIX contr arg one}
    \Sp'=(q-2)(\Sp+1)+\tfrac{9}{2}\No(\No-\Nt-\Nth)+2\Omega. 
  \end{equation}
  This yields the conclusion that for $-1<\Sp<1/3$,
  \begin{equation}\label{eq:Sp prime BIX contr arg}
    \Sp'<\tfrac{1}{2}(9\No-\Nt-\Nth)\No+2\Omega.
  \end{equation}
  Assume now that that there is a $\tau\in [\tau_2,\tau_1]$ such that $\Sp(\tau)$ is strictly less than the left hand side of
  (\ref{eq:Sptau lb}). Then there is a subinterval $[\tau_b,\tau_a]$ of $[\tau_2,\tau_1]$ such that $-1<\Sp(\tau)\leq \Sp(\tau_1)$ for
  $\tau\in [\tau_b,\tau_a]$, $\Sp(\tau_a)=\Sp(\tau_1)$ and $\Sp(\tau_b)$ is strictly less than the left hand side of (\ref{eq:Sptau lb}). Integrating
  (\ref{eq:Sp prime BIX contr arg}), keeping (\ref{eq:Omegadec prel}) and the above observations in mind yields
  \[
  \tfrac{C_\Omega}{1-3\a_V}e^{2(1-\a_V)\tau_1}<\Sp(\tau_a)-\Sp(\tau_b)<\textstyle{\int}_{\tau_b}^{\tau_a}2\Omega(s)ds\leq \tfrac{C_\Omega}{1-3\a_V}e^{2(1-\a_V)\tau_a},
  \]
  which is a contradiction. 
\end{proof}
Next, we generalise \cite[Lemma~17, p.~724]{cbu} to the present setting.
\begin{lemma}\label{eq:Spdoesnotconvtomone}
  Consider a Bianchi type VIII development satisfying the conditions of Lemma~\ref{lemma:BianchiAdevelopment} with $\No<0$, $\Nt>0$ and $\Nth>0$. Assume
  that $V\in\mfP_{\a_V}^{1}$, where $\a_V\in (0,1/3)$, and that (\ref{eq:Omegaphiprstrongdec}) holds. Then there is an $\e_{0}>0$ with the following
  properties. Fix an $\e$
  such that $0<\e<\e_{0}$. Then there are constants $A>0$ and $T\leq 0$ (depending only on $\e$ and the solution) such that if
  $[\tau_{1},\tau_{0}]$ is an interval with $\tau_{0}\leq T$ and $\Nt(\tau)$, $\Nth(\tau)\geq A$ for all $\tau\in [\tau_{1},\tau_{0}]$,
  and if $\Sp(\tau_{0})\geq -1+2\e$, then $\Sp(\tau)\geq -1+\e$ for all $\tau\in [\tau_{1},\tau_{0}]$. 
\end{lemma}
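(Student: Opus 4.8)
The statement is the Bianchi VIII analogue of \cite[Lemma~17, p.~724]{cbu}, which bounds $\Sp$ away from $-1$ as long as $\Nt,\Nth$ stay large, so the plan is to follow the structure of that argument, inserting the scalar field terms as controlled error terms. The starting point is the evolution equation for $\Sp$. Using (\ref{eq:spmp}), (\ref{eq:Spdef}) and (\ref{eq:altqdef}) one writes $\Sp'$ in a form in which the dominant contribution near $\Sp=-1$ comes from the $\Nt,\Nth$ terms and is manifestly positive (i.e., drives $\Sp$ up to the past), while the remaining contributions are either the cross term $\No(\Nt+\Nth)$, the curvature term involving $\No^2$, the matter term $\Omega\Sp$, and the factor $(q-2)(\Sp+1)$. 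The key new input relative to \cite{cbu} is that, under (\ref{eq:Omegaphiprstrongdec}) and Lemma~\ref{lemma:X growth}, $\Omega$ decays exponentially to the past; in fact $\Omega(\tau)\leq C_\Omega e^{2(1-3\a_V)\tau}$ by (\ref{eq:Omegadec prel}). Likewise $|\No\Nt\Nth|$ is bounded to the past (since $(\No\Nt\Nth)'=3q\No\Nt\Nth$ and $q-2$ is integrable, cf.\ the proof of Lemma~\ref{lemma:BVIIIcontrNiconcl}), so on the region where $\Nt,\Nth\geq A$ we have $|\No|\leq C/A^2$, making $\No$, $\No\Nt$, $\No\Nth$ small.

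\textbf{Key steps.} First I would fix $\e_0>0$ small enough that on $\{-1\leq\Sp\leq -1+2\e_0\}$ the ``good'' quadratic term in $\Sp'$ coming from $(\Nt-\Nth)^2$ and, more importantly, the term $-\No(2\No-\Nt-\Nth)$ (which is positive when $\No<0<\Nt,\Nth$) dominates; the relevant structural fact is that near $\Sp=-1$, combining (\ref{eq:spmp}), (\ref{eq:Spdef}), (\ref{eq:altqdef}) gives an identity of the schematic form
\begin{equation*}
  \Sp'=\tfrac32(\Nt-\Nth)^2(-(\Sp+1))+\tfrac32\No(\No-\Nt-\Nth)(\ldots)+\text{(matter)},
\end{equation*}
so I would write out the exact identity and isolate the sign structure. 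Second, I would choose $T\leq 0$ so that, for $\tau\leq T$, $C_\Omega e^{2(1-3\a_V)\tau}$ is smaller than any fixed threshold depending on $\e$, using $\a_V\in(0,1/3)$ so that $1-3\a_V>0$; this also controls $\int_{-\infty}^{T}\Omega$. Third, I would choose $A$ large enough that $|\No|\leq C/A^2$ is small enough on $\{\Nt,\Nth\geq A\}$ that the $\No$-terms in $\Sp'$ cannot, together with the (already small) matter term, overcome the good term. Fourth, I would run the standard contradiction/continuity argument: suppose $\Sp(\tau_0)\geq -1+2\e$ but $\Sp$ dips below $-1+\e$ somewhere in $[\tau_1,\tau_0]$; let $\tau_a$ be the largest such $\tau$ with $\Sp(\tau_a)=-1+\e$ and note $\Sp<-1+\e$ on a left neighbourhood is impossible once we show $\Sp'>0$ there, or alternatively integrate $\Sp'$ from the first crossing time: on the sub-interval where $-1<\Sp\leq -1+2\e$ and $\Nt,\Nth\geq A$, the chosen smallness of $T$ and $A$ force $\Sp'(\tau)\geq -C_\Omega e^{2(1-3\a_V)\tau}+(\text{nonneg})$, hence $\Sp$ cannot have decreased by more than $\int 2\Omega\leq\e$ going forward in $\tau$, contradicting the drop from $-1+2\e$ to below $-1+\e$. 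This is exactly the mechanism used in Lemma~\ref{lemma:bounding Sp BIX contr arg}, so I would phrase step four to parallel that proof.

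\textbf{Main obstacle.} The delicate point is the sign bookkeeping of the $\No$-terms in the VIII setting: unlike Bianchi IX where all $N_i>0$, here $\No<0$, so $\No(\No-\Nt-\Nth)=\No^2-\No(\Nt+\Nth)$ has $-\No(\Nt+\Nth)>0$ but can be large (it is $|\No|(\Nt+\Nth)$, and $\Nt+\Nth$ is \emph{not} assumed bounded), while $\No^2$ is small. One must check that in the combination appearing in $\Sp'$ near $\Sp=-1$ the potentially large term $|\No|(\Nt+\Nth)$ enters with the favourable sign, or is multiplied by a factor that vanishes at $\Sp=-1$; this is where the precise form of $S_+$ and the identity $q-2=2\Omega+\tfrac32[\text{quadratic in }N_i]$ must be exploited carefully, exactly as in \cite[Lemma~17]{cbu}. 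I expect that, just as there, one shows $\Nt+\Nth$ times $|\No|$ is controlled because $|\No\Nt\Nth|\leq C$ gives $|\No|(\Nt+\Nth)\leq C(\Nt+\Nth)/(\Nt\Nth)\leq C(1/\Nt+1/\Nth)\leq 2C/A$, which is small for $A$ large — this is the real reason the hypothesis ``$\Nt,\Nth\geq A$'' (rather than merely $\|N\|\geq A$) is needed. Once that estimate is in hand, the matter perturbation is strictly easier to absorb than in \cite{cbu} because $\Omega$ is exponentially small rather than merely $o(1)$, so the rest is routine.
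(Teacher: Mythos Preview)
Your strategy is the right one and matches the paper: write $\Sp'$ via (\ref{eq:spmp}), (\ref{eq:Spdef}), (\ref{eq:altqdef}), set up the contradiction between the endpoints $\Sp(\tau_a)=-1+\e$ and $\Sp(\tau_b)=-1+2\e$, and control the error terms using $|\No\Nt\Nth|\leq C$ and the exponential decay of $\Omega$. Your observation $|\No|(\Nt+\Nth)\leq C(\Nt^{-1}+\Nth^{-1})\leq 2C/A$ is correct and is indeed why the hypothesis $\Nt,\Nth\geq A$ is needed. However, two points are off. First, the contradiction runs via an \emph{upper} bound on $\Sp'$, not a lower one: the paper's identity is $\Sp'=-(2-q)(\Sp+1)+\tfrac{9}{2}\No^{2}-\tfrac{9}{2}\No(\Nt+\Nth)+2\Omega$, and on $[\tau_a,\tau_b]$ one drops the first term (which is $\leq 0$ since $2-q\geq 0$ in type VIII and $\Sp+1\geq\e$) to get $\e=\Sp(\tau_b)-\Sp(\tau_a)\leq\int_{\tau_a}^{\tau_b}[\tfrac{9}{2}\No^{2}+\tfrac{9}{2}|\No|(\Nt+\Nth)+2\Omega]$. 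There is no ``good dominating term'' driving $\Sp$ up; the term $-(2-q)(\Sp+1)$ has the wrong sign for that and can be of order $\e$ (since $(\Nt-\Nth)^2\leq 4/3$ by the constraint), so neither $\Sp'>0$ at the crossing nor a useful lower bound $\Sp'\geq -C_\Omega e^{2(1-3\a_V)\tau}+(\text{nonneg})$ is available.

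Second, and this is the genuine gap: your pointwise bound $|\No|(\Nt+\Nth)\leq 2C/A$ only gives $\int_{\tau_a}^{\tau_b}|\No|(\Nt+\Nth)\leq\tfrac{2C}{A}(\tau_b-\tau_a)$, and you have no control on $\tau_b-\tau_a$. The paper closes this by noting that, for $\e_0$ small enough, $q-4\Sp$, $q-\Sp+\sqrt{3}\Sm$ and $q-\Sp-\sqrt{3}\Sm$ are all $\geq 2$ on $[\tau_a,\tau_b]$ (since $\Sp$ is near $-1$, hence $\Sm$ is small by the constraint, and $q$ is near $2$). This makes $\No^2$, $|\No\Nt|$, $|\No\Nth|$ decay exponentially backward at a fixed rate; combined with their smallness at $\tau_b$ (your $2C/A$ bound), their integrals over $[\tau_a,\tau_b]$ are $\leq C/A$ \emph{independently of the interval length}. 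That exponential-decay step is what you are missing, and without it the argument does not close.
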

\begin{proof}
  Note, to begin with, that combining (\ref{eq:spmp}) and (\ref{eq:altqdef}) yields
  \begin{equation}\label{eq:altsppBVIII}
    \Sp'=-(2-q)(\Sp+1)+\tfrac{9}{2}\No^{2}-\tfrac{9}{2}\No(\Nt+\Nth)+2\Omega.
  \end{equation}
  Assume now, in order to arrive at a contradiction, that there is a $\tau_{a}\in [\tau_{1},\tau_{0}]$ such that $\Sp(\tau_{a})=-1+\e$.
  This $\tau_{a}$ can, without loss of generality, be assumed to be the first $\tau$ before $\tau_{0}$ at which $\Sp$ attains the value
  $-1+\e$. Then there is a $\tau_{b}\in [\tau_{1},\tau_{0}]$ with $\tau_{b}\geq \tau_{a}$ such that $-1+\e\leq\Sp(\tau)\leq -1+2\e$ for all
  $\tau\in [\tau_{a},\tau_{b}]$. Moreover, $\Sp(\tau_{b})=-1+2\e$. Integrating (\ref{eq:altsppBVIII}) with this information in mind yields
  \begin{equation}\label{eq:contrBVIIISpineq}
    \e\leq \tfrac{9}{2}\textstyle{\int}_{\tau_{a}}^{\tau_{b}}\No^{2}(\tau)d\tau-\tfrac{9}{2}\textstyle{\int}_{\tau_{a}}^{\tau_{b}}[\No(\Nt+\Nth)](\tau)d\tau
    +2\textstyle{\int}_{\tau_{a}}^{\tau_{b}}\Omega(\tau)d\tau.
  \end{equation}
  Fixing $T$ close enough to $-\infty$, the bound depending only on $\e$ and the solution, it is clear that the last term on the right hand
  side of (\ref{eq:contrBVIIISpineq}) can be assumed to be smaller than $\e/6$; this is an immediate consequence of Lemma~\ref{lemma:X growth}.
  Assuming $\e_{0}$ to be small enough, we can assume
  $q-4\Sp$, $q-\Sp+\sqrt{3}\Sm$ and $q-\Sp-\sqrt{3}\Sm$ to be bounded from below by $2$ on $[\tau_{a},\tau_{b}]$. This means that $\No^{2}$,
  $\No\Nt$ and $\No\Nth$ are arbitrarily small at $\tau_{b}$ (by choosing $A$ large enough and using the fact that $|\No\Nt\Nth|(\tau)\leq C$
  for $\tau\leq 0$) and exponentially decaying backwards in time (at a fixed rate). Choosing $\e_{0}$ small enough and $A$ large enough (depending
  only on $\e$ and the solution), we conclude that the sum of the first two terms on the right hand side of (\ref{eq:contrBVIIISpineq}) are bounded
  from above by $\e/3$. This yields the inequality $\e\leq \e/2$, which is a contradiction. The lemma follows. 
\end{proof}
Our next goal is to prove the existence of an $\a$-limit point, as in \cite{cbu}. There is one problem in doing so: in order to obtain a closed
autonomous system, we here, as opposed to the vacuum setting studied in \cite{cbu}, need to include $\theta$, but $\theta$ diverges to $\infty$ in
the direction of the singularity. There is a simple remedy to this problem; we simply introduce the variable $u=1/\theta$ and replace all occurrences
of $1/\theta$ with $u$. Moreover, we replace (\ref{eq:thetaprime}) by $u'=(1+q)u$. We then obtain an autonomous system of equations for $u$, $\phi_0$,
$\phi_1$, $\Sp$, $\Sm$, $\No$, $\Nt$ and $\Nth$ satisfying the constraint (\ref{eq:constraint}). In what follows, we refer to this system as the
\textit{modified Wainwright-Hsu equations}.
\index{Modified Wainwright-Hsu equations}%
Note also that setting $u=0$ and $\phi_1=0$ leads to the vacuum version of the original Wainwright-Hsu equations; i.e., exactly the equations studied
in \cite{cbu}. Next, note that if (\ref{eq:Omegaphiprstrongdec}) and the conditions of Lemma~\ref{lemma:X growth} hold, then $\Omega$ and $\phi'$
decay as $e^{2\tau}$, so that there is a $\phi_\infty$ with the property that
\begin{equation}\label{eq:u phi one phi zero limit}
  \lim_{\tau\rightarrow-\infty}(u,\phi_0,\phi_1)(\tau)=(0,\phi_\infty,0).
\end{equation}
That there is an $\a$-limit point in the case of Bianchi type IX follows by an argument similar to the proof of \cite[Proposition~2, p.~724--725]{cbu}.
\begin{prop}\label{prop:alpexBIX}
  Consider a Bianchi type IX development satisfying the conditions of Lemma~\ref{lemma:BianchiAdevelopment} with $N_i>0$. Assume
  that $V\in\mfP_{\a_V}^{1}$, where $\a_V\in (0,1/3)$, and that (\ref{eq:Omegaphiprstrongdec}) holds. Then the corresponding solution to the modified
  Wainwright-Hsu equations has an $\a$-limit point. 
\end{prop}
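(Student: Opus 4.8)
The plan is to show that the solution cannot escape to infinity in the state space as $\tau\to-\infty$, so that $\omega$-limit-set-type compactness arguments (here in the $\a$-direction) force the existence of an $\a$-limit point. First I would record that all the relevant variables are confined to a compact set: by Lemma~\ref{lemma:X growth} and the assumption (\ref{eq:Omegaphiprstrongdec}) we have $\Omega,|\phi'|\to 0$ (in fact like $e^{2\tau}$), hence (\ref{eq:u phi one phi zero limit}) holds, so $(u,\phi_0,\phi_1)$ converges to $(0,\phi_\infty,0)$ and in particular stays bounded for $\tau\le 0$. By Lemma~\ref{lemma:BVIIIcontrNiconcl} the set $\{(\Sp,\Sm)(\tau)\,:\,\tau\le 0\}$ is bounded. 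The only variables that could potentially be unbounded are the $N_i$; so the entire argument reduces to controlling $\|N\|=(\No^2+\Nt^2+\Nth^2)^{1/2}$ along the backward trajectory.

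The key dichotomy is: either $\|N(\tau)\|$ stays bounded as $\tau\to-\infty$, or $\|N(\tau)\|\to\infty$. In the first case the whole solution trajectory is contained in a compact subset of the state space of the modified Wainwright-Hsu equations, and since that system is autonomous with a continuous flow, the backward orbit closure is a nonempty compact invariant set; any point of it is an $\a$-limit point, and we are done. The work is therefore all in the second case, where I would argue by contradiction that it cannot in fact occur (or, more precisely, that even when $\|N\|\to\infty$ one can still extract an $\a$-limit point after passing to a suitable rescaling/subsequence — but I expect the cleaner route is to rule it out). Here I follow the strategy of \cite[Proposition~2, pp.~724--725]{cbu}: by Lemma~\ref{lemma:BVIIIcontrNiconcl} (Bianchi type IX case), up to a permutation of indices, $\|N\|\to\infty$ forces $\Nt,\Nth\to\infty$ while $\No,\No\Nt,\No\Nth\to 0$. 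One then uses the constraint (\ref{eq:constraint}), which in the form involving $\tfrac34[\No^2+(\Nt-\Nth)^2]-\tfrac32\No(\Nt+\Nth)+\cdots$ shows $\Nt-\Nth$ must stay bounded while the cross terms vanish, and Lemma~\ref{lemma:bounding Sp BIX contr arg} to prevent $\Sp$ from drifting below $-1$ on long intervals where $9\No<\Nt+\Nth$ (which holds eventually, since $\No\to0$ and $\Nt+\Nth\to\infty$). Combining the lower bound on $\Sp$ from Lemma~\ref{lemma:bounding Sp BIX contr arg} with the evolution equation (\ref{eq:Niprprel}) for $\Nt$ and $\Nth$ — whose growth/decay is governed by $f_i(q,\Sp,\Sm)$ — one derives that $\Nt\Nth$ cannot in fact diverge (the product $\No\Nt\Nth$ satisfies $(\No\Nt\Nth)'=3q(\No\Nt\Nth)$ and is bounded to the past since $\Omega\in L^1$, which combined with $\No\to 0$ at a controlled exponential rate already forces $\Nt\Nth\to\infty$; one then shows this is incompatible with the constraint bound on $\Nt-\Nth$ together with $\Nt,\Nth$ individually large), yielding a contradiction. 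Hence $\|N\|$ is bounded to the past and the compactness argument of the first case applies.

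I would organize the write-up as: (i) invoke Lemmas~\ref{lemma:X growth} and \ref{lemma:BVIIIcontrNiconcl} to reduce to controlling $\|N\|$ and to record (\ref{eq:u phi one phi zero limit}); (ii) dispose of the bounded-$\|N\|$ case by the autonomous-flow compactness remark; (iii) in the unbounded case, apply Lemma~\ref{lemma:BVIIIcontrNiconcl} to get the pattern $\Nt,\Nth\to\infty$, $\No,\No\Nt,\No\Nth\to 0$, then use the constraint and Lemma~\ref{lemma:bounding Sp BIX contr arg} to bound $\Sp$ from below, and finally use the evolution equations to reach a contradiction exactly as in \cite[pp.~724--725]{cbu}, the only new ingredient being that the terms $\Omega$ and $\phi_1^2/6$ appearing in $\Psi$ and $q$ decay exponentially and hence do not affect the asymptotic balance. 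The main obstacle is step (iii): one must check that the perturbations introduced by the non-linear scalar field (the $\Omega$ and $\phi'$ terms, and the extra variable $u$) are genuinely negligible in every inequality borrowed from \cite{cbu}; since (\ref{eq:Omegaphiprstrongdec}) and Lemma~\ref{lemma:X growth} give $\Omega,|\phi'|=O(e^{2\tau})$ while the relevant Bianchi IX mechanisms operate at $O(1)$ or slower exponential rates in the relevant regimes, this is expected to go through, but it requires care in tracking the explicit constants (e.g.\ the $C_\Omega/(1-3\a_V)$ factor from (\ref{eq:Omegadec prel})) so that the contradictions are strict.
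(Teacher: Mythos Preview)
Your overall framework is right and matches the paper: reduce to controlling $\|N\|$, note that if no $\alpha$-limit point exists then $\|N\|\to\infty$, invoke Lemma~\ref{lemma:BVIIIcontrNiconcl} to get the pattern $\Nt,\Nth\to\infty$ and $\No,\No\Nt,\No\Nth\to 0$, and then derive a contradiction following \cite[pp.~724--725]{cbu}. The scalar-field perturbations are indeed $O(e^{2\tau})$ and harmless. One minor point: your dichotomy ``either $\|N\|$ stays bounded or $\|N\|\to\infty$'' is false as stated; the correct formulation is the contrapositive the paper uses, namely that the \emph{absence} of an $\alpha$-limit point forces $\|N\|\to\infty$.

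The real gap is in your sketch of the contradiction itself. You write that one ``shows this is incompatible with the constraint bound on $\Nt-\Nth$ together with $\Nt,\Nth$ individually large''. There is no such incompatibility: $\Nt,\Nth\to\infty$ with $|\Nt-\Nth|$ bounded and $\No(\Nt+\Nth)\to 0$ is perfectly consistent with (\ref{eq:constraint}). The actual mechanism is different. From $(\Nt\Nth)'=2(q+2\Sp)\Nt\Nth$ and $\Nt\Nth\to\infty$ one gets $\int_0^\tau(q+2\Sp)\,ds\to+\infty$, hence (using $q\le 2+2\Omega$ and integrability of $\Omega$) $\int_\tau^0\Sp\,ds\to-\infty$. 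Now Lemma~\ref{lemma:bounding Sp BIX contr arg} forces a case split: either there is a time $\tau_2$ with $\Sp(\tau_2)>-1+\tfrac{C_\Omega}{1-3\a_V}e^{2(1-\a_V)\tau_2}$, in which case the lemma gives $\Sp(\tau)\ge\alpha>-1$ for all $\tau\le\tau_2$, and then one integrates $\Sp'$ using (\ref{eq:Sp prime BIX contr arg one}) over the set $\{q+2\Sp<0\}$ to show $\Sp(\tau)\to+\infty$, which is absurd; or $\Sp(\tau)\le-1+\tfrac{C_\Omega}{1-3\a_V}e^{2(1-\a_V)\tau}$ for all $\tau\le T$, which together with $\No(\Nt+\Nth)\to 0$ and the constraint forces $(\Sp,\Sm)\to(-1,0)$. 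This second case you do not mention at all, and it requires separate handling: Proposition~\ref{prop:limitcharsp} then implies $\Nt=\Nth$, $\Sm=0$, i.e.\ the solution is LRS, and Corollary~\ref{cor:LRS Bianchi type IX} shows all $N_i$ converge, contradicting $\|N\|\to\infty$. Both branches are needed; the contradiction is not obtainable from the constraint alone.
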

\begin{proof}
  Under the assumptions of the proposition, there is a $\phi_\infty$ such that (\ref{eq:u phi one phi zero limit}) holds. In the present setting, it is
  therefore sufficient to prove that there is a sequence $\tau_{k}\rightarrow-\infty$ such that $(\Sp,\Sm,\No,\Nt,\Nth)(\tau_{k})$ converges.  
  Assume that there is no limit point of the desired type. Then, since $(\Sp,\Sm)$ remains bounded to the past, $\|N\|$ has to converge to $\infty$.
  Due to Lemma~\ref{lemma:BVIIIcontrNiconcl}, we can assume that $\No$, $\No\Nt$ and $\No\Nth$ converge to zero and that $\Nt$, $\Nth$ converge
  to infinity. Due to (\ref{eq:nip}), $(\Nt\Nth)'=2(q+2\Sp)\Nt\Nth$. Since $\Nt\Nth\rightarrow\infty$, this means that
  \begin{equation}\label{eq:q plus two Sp int}
    \textstyle{\int}_0^\tau(q+2\Sp)ds\rightarrow\infty
  \end{equation}
  as $\tau\rightarrow-\infty$. Combining this observation with (\ref{eq:qdef}) and the fact that $\Omega$ is integrable to the past, it is clear
  that
  \[
  \textstyle{\int}_\tau^0\Sp ds\rightarrow-\infty
  \]
  as $\tau\rightarrow-\infty$. Due to Lemma~\ref{lemma:bounding Sp BIX contr arg}, it follows that there is a $T\leq 0$ such that
  \[
  \Sp(\tau)\leq \tfrac{C_\Omega}{1-3\a_V}e^{2(1-\a_V)\tau}
  \]
  for all $\tau\leq T$. Next, assume that there is a $\tau_2\leq T$ such that
  \[
  -1+\tfrac{C_\Omega}{1-3\a_V}e^{2(1-\a_V)\tau_2}<\Sp(\tau_2).
  \]
  Then, due to Lemma~\ref{lemma:bounding Sp BIX contr arg},
  \[
  -1<\Sp(\tau_2)-\tfrac{C_\Omega}{1-3\a_V}e^{2(1-\a_V)\tau_2}\leq\Sp(\tau)
  \]
  for all $\tau\leq\tau_2$. For future reference, we denote the expression in the middle by $\a$. Note, moreover, that since $\Sp>-1$, $q-2\leq q+2\Sp$.
  Introduce $\ma:=\{\tau:q+2\Sp<0\}$ and $\ma_\tau:=\ma\cap [\tau,\tau_2]$. Due to (\ref{eq:Sp prime BIX contr arg}) and the fact that the first
  term on the right hand side of (\ref{eq:Sp prime BIX contr arg}) is negative, 
  \begin{equation*}
    \begin{split}
      \Sp(\tau_2)-\Sp(\tau) = & \textstyle{\int}_{\tau}^{\tau_2}\Sp'ds\leq \textstyle{\int}_{\ma_\tau}\Sp'ds+\textstyle{\int}_{\tau}^{\tau_2}2\Omega ds\\
      \leq & \textstyle{\int}_{\ma_\tau}(q-2)(\Sp+1)ds+\textstyle{\int}_{\tau}^{\tau_2}4\Omega ds,
    \end{split}
  \end{equation*}
  where we appealed to (\ref{eq:Sp prime BIX contr arg one}) in the last step. Since $q-2<q+2\Sp<0$ and $1+\Sp\geq 1+\a>0$ in $\ma_\tau$, it follows that
  \[
  \Sp(\tau_2)-\Sp(\tau)\leq (1+\a)\textstyle{\int}_{\ma_\tau}(q+2\Sp)ds+\textstyle{\int}_{\tau}^{\tau_2}4\Omega ds.
  \]
  However, due to (\ref{eq:q plus two Sp int}), the first term on the right hand side tends to $-\infty$ as $\tau\rightarrow-\infty$ and the second term
  on the right hand side is bounded due to (\ref{eq:Omegadec prel}). This implies that $\Sp\rightarrow\infty$, a contradiction. Next, assume that
  \[
  \Sp(\tau)\leq -1+\tfrac{C_\Omega}{1-3\a_V}e^{2(1-\a_V)\tau}
  \]
  for all $\tau\leq T$. Since $\No\Nt$ and $\No\Nth$ converge to zero, (\ref{eq:constraint}) then implies that $(\Sp,\Sm)\rightarrow (-1,0)$.
  Combining this observation with Proposition~\ref{prop:limitcharsp} yields the conclusion that $\Nt=\Nth$ and $\Sm=0$; i.e., that the solution is
  LRS. Then, due to Corollary~\ref{cor:LRS Bianchi type IX}, all the $N_i$ converge, contradicting the assumptions.
  To conclude, the assumption that $\|N\|\rightarrow\infty$ leads to a contradiction. Since $(\Sp,\Sm)$ remains bounded to the past, we conclude
  that the desired limit point exists. 
\end{proof}
Next, we prove the existence of an $\a$-limit point in the Bianchi VIII setting by carrying out an argument similar to the proof of \cite[Proposition~3, p.~725]{cbu}.
\begin{prop}\label{prop:alpexBVIII}
  Consider a Bianchi type VIII development with $\No<0$, $\Nt>0$ and $\Nth>0$, satisfying the conditions of Lemma~\ref{lemma:BianchiAdevelopment}.
  Assume that $V\in\mfP_{\a_V}^{1}$, where $\a_V\in (0,1/3)$, and that (\ref{eq:Omegaphiprstrongdec}) holds. Then the corresponding solution to the modified
  Wainwright-Hsu equations has an $\a$-limit point. 
\end{prop}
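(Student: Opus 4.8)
The plan is to mimic the Bianchi type IX argument of Proposition~\ref{prop:alpexBIX}, adapting it to the Bianchi type VIII situation as in the passage from \cite[Proposition~2]{cbu} to \cite[Proposition~3]{cbu}, while carrying along the extra variables $(u,\phi_0,\phi_1)$ of the modified Wainwright--Hsu equations. As a preliminary step, note that since the conditions of Lemma~\ref{lemma:BianchiAdevelopment} and (\ref{eq:Omegaphiprstrongdec}) hold, Lemma~\ref{lemma:X growth} applies, so $\Omega$ and $\phi'$ decay like $e^{2\tau}$ and (\ref{eq:u phi one phi zero limit}) holds; in particular $(u,\phi_0,\phi_1)(\tau)\to(0,\phi_\infty,0)$. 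Thus it suffices to produce a sequence $\tau_k\to-\infty$ along which $(\Sp,\Sm,\No,\Nt,\Nth)$ converges. By Lemma~\ref{lemma:BVIIIcontrNiconcl} the set of $(\Sp,\Sm)(\tau)$, $\tau\le 0$, is bounded, so if no such limit point exists then $\|N(\tau)\|\to\infty$; again by Lemma~\ref{lemma:BVIIIcontrNiconcl} we may then assume $\Nt,\Nth\to\infty$ while $\No$, $\No\Nt$, $\No\Nth\to 0$ as $\tau\to-\infty$. The goal is to derive a contradiction from this.

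The key mechanism is the same as in the Bianchi IX case: since $(\Nt\Nth)'=2(q+2\Sp)\Nt\Nth$ by (\ref{eq:nip}), the divergence $\Nt\Nth\to\infty$ forces $\int_0^\tau(q+2\Sp)\,ds\to\infty$ as $\tau\to-\infty$; combining this with (\ref{eq:qdef}) and the integrability of $\Omega$ to the past (Lemma~\ref{lemma:X growth}) yields $\int_\tau^0\Sp\,ds\to-\infty$, so $\Sp$ is ``on average'' close to (or below) $-1$ to the past. Here is where the VIII-specific input enters: I would invoke Lemma~\ref{eq:Spdoesnotconvtomone}, which says that once $\Nt,\Nth$ are large (which holds to the past by assumption) and $\Sp$ is bounded away from $-1$ at one time, it stays bounded away from $-1$ on the whole interval going back. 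This is the substitute for Lemma~\ref{lemma:bounding Sp BIX contr arg} used in the IX proof. So there are two cases. Either $\Sp(\tau)\to-1$ as $\tau\to-\infty$; then, since $\No\Nt,\No\Nth\to 0$ and $\Psi\to 0$ (the latter from (\ref{eq:Omegaphiprstrongdec}) and the constraint), (\ref{eq:constraint}) forces $(\Sp,\Sm)\to(-1,0)$, and Proposition~\ref{prop:limitcharsp} (applicable since the conditions of Lemma~\ref{lemma:tautimecoord} or \ref{lemma:tautimecoord BIX} hold) then gives $\Nt=\Nth$, $\Sm=0$, i.e. the solution is LRS. But an LRS Bianchi type VIII solution with $\No<0$, $\Nt=\Nth$, $\Sm=0$ satisfying (\ref{eq:Omegaphiprstrongdec}) has all $N_i$ converging by Corollary~\ref{cor:LRS Bianchi type VIII}, contradicting $\Nt,\Nth\to\infty$.

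In the remaining case, $\Sp$ does not converge to $-1$, so by Lemma~\ref{eq:Spdoesnotconvtomone} there is an $\e>0$ and a $T\le 0$ with $\Sp(\tau)\ge -1+\e$ for all $\tau\le T$. Then $1+\Sp\ge\e>0$ to the past, and I would estimate $\Sp(\tau_2)-\Sp(\tau)$ using the evolution equation for $\Sp$: from (\ref{eq:spmp}) and (\ref{eq:altqdef}) one has an identity of the form $\Sp'=(q-2)(\Sp+1)+\tfrac{9}{2}\No^2-\tfrac{9}{2}\No(\Nt+\Nth)+2\Omega$, and on the set where $q+2\Sp<0$ one has $q-2<q+2\Sp<0$ while the $\No^2$ and $-\No(\Nt+\Nth)$ terms are, by the decay of $\No,\No\Nt,\No\Nth$, integrable (and of controlled sign after shrinking $T$). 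Splitting the integral over $\{q+2\Sp<0\}$ and its complement and using $1+\Sp\ge\e$ there, the contribution of the first term is bounded above by $\e$ times $\int_{\{q+2\Sp<0\}}(q+2\Sp)\,ds$, which tends to $-\infty$ by the $\int(q+2\Sp)\to\infty$ conclusion above, while the remaining terms stay bounded (using (\ref{eq:Omegadec prel}) and $|\No\Nt\Nth|\le C$). This forces $\Sp(\tau)\to+\infty$ as $\tau\to-\infty$, contradicting $|\Sp|\le 1$ from the constraint. Either way we reach a contradiction, so $\|N(\tau)\|$ cannot diverge, a limit point of $(\Sp,\Sm,\No,\Nt,\Nth)$ exists along a sequence $\tau_k\to-\infty$, and combined with (\ref{eq:u phi one phi zero limit}) the full solution to the modified Wainwright--Hsu equations has an $\a$-limit point.

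I expect the main obstacle to be the careful bookkeeping in the contradiction estimate of the last paragraph: one must verify that after choosing $T$ sufficiently close to $-\infty$ (and, if needed, suitable thresholds $A$ for $\Nt,\Nth$ via Lemma~\ref{eq:Spdoesnotconvtomone} and the bound $|\No\Nt\Nth|\le C$), the $\No^2$, $\No\Nt$, $\No\Nth$ and $\Omega$ terms genuinely contribute only a bounded amount to $\int_\tau^{\tau_2}\Sp'\,ds$, uniformly in $\tau$, so that the divergent term coming from $\int(q+2\Sp)$ dominates. This is exactly the point where the analogous vacuum argument in \cite[Proposition~3]{cbu} has to be modified to absorb the $2\Omega$ source term, and it is handled precisely because $\Omega$ decays exponentially by Lemma~\ref{lemma:X growth}; everything else is a faithful transcription of the Choquet-Bruhat--Uggla scheme with $u$, $\phi_0$, $\phi_1$ evolving trivially in the limit.
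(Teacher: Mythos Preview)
Your preliminary reductions and the treatment of the case $\Sp(\tau)\to-1$ are correct and agree with the paper. The gap is in the ``remaining case'' where $\Sp\ge-1+\e$ to the past: you try to transplant the Bianchi~IX splitting argument of Proposition~\ref{prop:alpexBIX} verbatim, and this does not go through in type~VIII.

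The problem is the two source terms $\tfrac{9}{2}\No^{2}$ and $-\tfrac{9}{2}\No(\Nt+\Nth)$ in $\Sp'=(q-2)(\Sp+1)+\tfrac{9}{2}\No^{2}-\tfrac{9}{2}\No(\Nt+\Nth)+2\Omega$. In Bianchi~IX (all $N_i>0$) these combine into $\tfrac{9}{2}\No(\No-\Nt-\Nth)$, which is \emph{negative} once $9\No<\Nt+\Nth$ and can simply be dropped. In Bianchi~VIII, with $\No<0$, the term $-\tfrac{9}{2}\No(\Nt+\Nth)$ is \emph{positive}, so for your splitting argument to work you need $\int_{-\infty}^{0}[\No^{2}-\No(\Nt+\Nth)]\,ds<\infty$. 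You assert this (``by the decay of $\No,\No\Nt,\No\Nth$, integrable''), but Lemma~\ref{lemma:BVIIIcontrNiconcl} only gives convergence to zero, with no rate; the bound $|\No\Nt\Nth|\le C$ only yields $|\No(\Nt+\Nth)|\le C(\Nt^{-1}+\Nth^{-1})$, which again need not be integrable. So the step ``the remaining terms stay bounded'' is unjustified.

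The paper sidesteps this by a different estimate. Instead of splitting, it rewrites $\Sp'$ via (\ref{eq:qdef}) as
\[
  \Sp'=2(\Sp^{2}+\Sm^{2}-1)(\Sp+1)+\tfrac{9}{2}\No^{2}-\tfrac{9}{2}\No(\Nt+\Nth)+2\Psi(\Sp+1)-2\Omega\Sp,
\]
and uses $\Sp^{2}+\Sm^{2}\le 1$ together with $\Sp+1\ge\e/2$ to obtain $2(\Sp^{2}+\Sm^{2}-1)(\Sp+1)\le\e(\Sp^{2}+\Sm^{2}+\Sp)-\e^{2}/2$. The key is the negative buffer $-\e^{2}/2$: since the four nuisance terms merely converge to zero, they are eventually $<\e^{2}/2$, yielding the \emph{pointwise} inequality $\Sp'\le\e(\Sp^{2}+\Sm^{2}+\Sp)$ for $\tau\le T_{1}$. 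Integrating and using (\ref{eq:intdivBVIII}) then forces $\Sp(\tau)\to+\infty$, the desired contradiction. This only needs convergence to zero, not integrability, and that is exactly the structural difference between the VIII and IX arguments that your proposal misses.
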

\begin{proof}
  As in the proof of Proposition~\ref{prop:alpexBIX}, it is sufficient to demonstrate that there is a sequence $\tau_{k}\rightarrow-\infty$ such that
  $(\Sp,\Sm,\No,\Nt,\Nth)(\tau_{k})$ converges. Due to the constraint (\ref{eq:constraint}), $\Sp$, $\Sm$ and $\No$ are bounded for all $\tau$.
  If there is no $\a$-limit point, we can therefore, appealing to Lemma~\ref{lemma:BVIIIcontrNiconcl}, assume that $\No$, $\No\Nt$ and $\No\Nth$
  converge to zero and that $\Nt$, $\Nth$ converge to infinity. In particular, $(\Nt\Nth)(\tau)\rightarrow\infty$ as $\tau\rightarrow-\infty$. Combining
  this observation with (\ref{eq:qdef}), (\ref{eq:nip}) and Lemma~\ref{lemma:X growth} yields the conclusion that
  \begin{equation}\label{eq:intdivBVIII}
    \textstyle{\int}_{\tau}^{0}(\Sp^{2}+\Sm^{2}+\Sp)(s)ds\rightarrow-\infty
  \end{equation}
  as $\tau\rightarrow -\infty$.
  Let us first consider the case that $\Sp(\tau)\rightarrow-1$ as $\tau\rightarrow -\infty$. Then Proposition~\ref{prop:limitcharsp} implies
  that $\Nt=\Nth$ and $\Sm=0$, so that Corollary~\ref{cor:LRS Bianchi type VIII} yields a contradiction to the assumption.
  If $\Sp$ does not converge to $-1$, we can appeal to Lemma~\ref{eq:Spdoesnotconvtomone} to conclude that there is an $\e>0$ and a $T\leq 0$
  such that $\Sp(\tau)\geq -1+\e/2$ for all $\tau\leq T$. Combining this observation with (\ref{eq:altsppBVIII}) and (\ref{eq:qdef}) yields,
  assuming $\tau\leq T$,
  \begin{equation*}
    \begin{split}
      \Sp' = & 2(\Sp^{2}+\Sm^{2}-1)(\Sp+1)+\tfrac{9}{2}\No^{2}-\tfrac{9}{2}\No(\Nt+\Nth)+2\Psi(\Sp+1)-2\Omega\Sp\\
      \leq & \e(\Sp^{2}+\Sm^{2}+\Sp-\Sp-1)+\tfrac{9}{2}\No^{2}-\tfrac{9}{2}\No(\Nt+\Nth)+2\Psi(\Sp+1)-2\Omega\Sp\\
      \leq & \e(\Sp^{2}+\Sm^{2}+\Sp)-\tfrac{\e^{2}}{2}+\tfrac{9}{2}\No^{2}-\tfrac{9}{2}\No(\Nt+\Nth)+2\Psi(\Sp+1)-2\Omega\Sp.
    \end{split}
  \end{equation*}
  Since the last four terms converge to zero as $\tau\rightarrow-\infty$ (where we used the conclusions of Lemma~\ref{lemma:BVIIIcontrNiconcl}),
  it follows that there is a $T_{1}$ such that
  \[
  \Sp'\leq \e(\Sp^{2}+\Sm^{2}+\Sp)
  \]
  for all $\tau\leq T_{1}$. Integrating this inequality from $\tau\leq T_{1}$ to $T_{1}$ yields
  \[
  \Sp(T_{1})-\Sp(\tau)\leq \textstyle{\int}_{\tau}^{T_{1}}\e(\Sp^{2}+\Sm^{2}+\Sp)(s)ds\rightarrow-\infty,
  \]
  where we appealed (\ref{eq:intdivBVIII}) in the last step. Since $\Sp$ is bounded, this conclusion leads to a contradiction. The proposition
  follows. 
\end{proof}

\begin{lemma}\label{lemma:NoNtNthconvtozero}
  Consider a Bianchi type VIII or IX development satisfying the conditions of Lemma~\ref{lemma:BianchiAdevelopment}. Assume that $V\in\mfP_{\a_V}^{1}$, where
  $\a_V\in (0,1/3)$, and that (\ref{eq:Omegaphiprstrongdec}) holds. Then $(\No\Nt\Nth)(\tau)$ tends to zero as $\tau\rightarrow -\infty$.
\end{lemma}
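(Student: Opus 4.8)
The plan is to exploit the evolution equation for the product $\No\Nt\Nth$ together with the exponential decay of $\Omega$ in the direction of the singularity. Recall from (\ref{eq:nip}) and (\ref{seq:fidef}) that
\[
(\No\Nt\Nth)' = (f_1+f_2+f_3)(\No\Nt\Nth) = 3q\,(\No\Nt\Nth),
\]
since $f_1+f_2+f_3 = 3q$. Hence, setting $W := \No\Nt\Nth$, we have $W' = 3qW$, so $W$ never changes sign and $|W|$ evolves multiplicatively. Writing $q = 2 - (2-q)$ and using (\ref{eq:qminustwoaltex}), i.e.
\[
2-q = 2\Omega + \tfrac{3}{2}[\No^2+\Nt^2+\Nth^2-2(\No\Nt+\Nt\Nth+\No\Nth)],
\]
we see that $2-q$ is the sum of $2\Omega\geq 0$ and (for Bianchi types VIII and IX) the polynomial $-\tfrac{2}{3}\bS/\text{(normalisation)}$; but this polynomial need not have a sign. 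So a direct integration of $W' = 3qW$ toward $\tau\to-\infty$ does not immediately give decay. The key point is that $|W|$ is \emph{bounded} to the past (this is already used inside the proofs of Lemmas~\ref{lemma:BVIIIcontrNiconcl}, \ref{eq:Spdoesnotconvtomone} and Propositions~\ref{prop:alpexBIX}, \ref{prop:alpexBVIII}): indeed from $W' = 3qW$ with $-1\leq q\leq 2$ and the integrability of $\Omega$, one gets $|W|(\tau)\leq C$ for all $\tau\leq 0$.

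First I would establish, via Propositions~\ref{prop:alpexBIX} and \ref{prop:alpexBVIII}, that the corresponding solution of the modified Wainwright-Hsu equations has an $\a$-limit point, say $p_\infty$. At $p_\infty$ we have $u=0$, $\phi_1=0$ (from (\ref{eq:u phi one phi zero limit})), and $p_\infty$ lies on the constraint surface of the \emph{vacuum} Wainwright-Hsu system, which is precisely the setting of \cite{cbu}. The vacuum Bianchi type VIII and IX flows are well understood: the $\a$-limit set of any non-LRS solution is contained in the union of the Kasner circle and the (Bianchi I and II) arcs between its points; in particular $\|N\| \to 0$ along a subsequence is false in general, but the decisive structural fact is that the function $Z := |\No\Nt\Nth|$ tends to $0$ along the flow, because on the Kasner circle (where the $\a$-limit dynamics concentrates) at least two of the $N_i$ vanish. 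I would make this precise as follows: if $W(\tau)$ did not tend to $0$, then since $|W|$ is bounded, there would be a sequence $\tau_k\to-\infty$ with $|W(\tau_k)|\geq \delta > 0$; passing to a further subsequence, $(\Sp,\Sm,\No,\Nt,\Nth)(\tau_k)$ converges to an $\a$-limit point $q_\infty$ with $|\No\Nt\Nth|(q_\infty)\geq\delta$, so all three $N_i$ are nonzero at $q_\infty$. But $q_\infty$ together with $u=0$, $\phi_1=0$ is a point on the vacuum constraint surface with all $N_i\neq 0$; by the analysis of \cite{cbu} (specifically the Bianchi VIII/IX attractor result), the $\a$-limit set of the \emph{full} trajectory is then forced to contain the entire heteroclinic structure attached to $q_\infty$, in particular points where two of the $N_i$ vanish, contradicting the claim that $W$ stays bounded away from $0$ — or, alternatively and more cleanly, one uses that the only way a trajectory can have an $\a$-limit point with all $N_i>0$ (resp.\ $\No<0<\Nt,\Nth$) and bounded $\|N\|$ is to be itself a fixed point or heteroclinic orbit of that system, which is incompatible with the growth $\Nt\Nth\to\infty$ established under the contrary hypothesis in Lemma~\ref{lemma:BVIIIcontrNiconcl}.

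The cleanest route, and the one I would ultimately write up, separates two cases according to whether $\|N(\tau)\|$ is bounded to the past. If $\|N\|$ is bounded, then $\No$, $\Nt$, $\Nth$ are all bounded; I would then argue that the polynomial $\No^2+\Nt^2+\Nth^2-2(\No\Nt+\Nt\Nth+\No\Nth)$ appearing in (\ref{eq:qminustwoaltex}) is integrable to the past — this follows from the constraint (\ref{eq:constraint}), the boundedness of $\Sp,\Sm,\Psi$, and the equation (\ref{eq:fdiffeq}) / the monotonicity-type arguments already invoked in Theorem~\ref{thm:dichotomy}, much as in the matter-dominated analysis — hence $2-q$ is integrable to the past, hence (using (\ref{eq:constraint}) again and the decay of $\Omega$) each $N_i^2$ is integrable to the past; since $N_i'$ is bounded to the past, $N_i(\tau)\to 0$, and in particular $W(\tau)\to 0$. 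If $\|N\|$ is unbounded, then by Lemma~\ref{lemma:BVIIIcontrNiconcl} (up to permutation in the Bianchi IX case) we have $\No(\tau)\to 0$ while $\No\Nt, \No\Nth\to 0$; therefore $W = \No\cdot(\Nt\Nth)$, and although $\Nt\Nth\to\infty$, one uses $W' = 3qW$ with the fact that $\int_\tau^0(q-2)\,ds$ is controlled on the unbounded case: more directly, $|W|$ is bounded \emph{and} $|\No|\to 0$ gives — wait, that is not enough by itself since $\Nt\Nth$ blows up. So in this sub-case I would instead argue: from $(\No\Nt)'=2(q+2\Sp)\No\Nt$ and $\No\Nt\to 0$, and similarly $\No\Nth\to 0$, we get $W = \No\Nt\Nth = (\No\Nt)\cdot\Nth$; combining $(\No\Nt\Nth)' = 3qW$ with the bound $|W|\leq C$ and the exponential decay of $\Omega$, and noting $q\to$ a limit forcing $W$ to decay — here I would quote \cite[Lemmas~6, 7 and 15]{cbu} directly, exactly as Lemma~\ref{lemma:BVIIIcontrNiconcl} does, which in the vacuum analysis yield $W\to 0$ precisely in the regime where two $N_i$ blow up and one decays. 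The perturbation by $\Omega$ and $\phi_1$, both decaying like $e^{2\tau}$ by (\ref{eq:Omegaphiprstrongdec}) and Lemma~\ref{lemma:X growth}, does not affect these conclusions.

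\textbf{Main obstacle.} The hard part will be the case $\|N\|$ unbounded: one must show that even though $\Nt\Nth\to\infty$, the damping of $\No$ is fast enough that the product $\No\Nt\Nth$ still goes to $0$. In the pure-vacuum setting this is the content of the Kasner-map/bounce analysis of \cite{cbu} (their Lemmas 6, 7, 15 and the proof of the existence of $\a$-limit points), and the real work is checking that every estimate there survives the addition of the exponentially small terms $\Omega$ and $\phi_1$ — which is exactly what the preceding lemmas of this chapter (Lemma~\ref{lemma:BVIIIcontrNiconcl}, Lemma~\ref{lemma:bounding Sp BIX contr arg}, Lemma~\ref{eq:Spdoesnotconvtomone}, Propositions~\ref{prop:alpexBIX} and \ref{prop:alpexBVIII}) were set up to handle. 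So in practice the proof is short: invoke boundedness of $|\No\Nt\Nth|$, split on boundedness of $\|N\|$, and in each case quote the relevant already-proved lemma of this chapter (for the unbounded case) or run the integrability argument (for the bounded case) to conclude $(\No\Nt\Nth)(\tau)\to 0$.
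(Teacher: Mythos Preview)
There is a genuine gap. The key idea you miss is that the evolution $W'=3qW$ together with the lower bound $q\geq -2\Omega$ (immediate from (\ref{eq:qdef}), since $\Sp^2+\Sm^2+\Psi\geq 0$) yields a clean dichotomy: the negative part of $q$ is integrable to the past because $\Omega$ decays exponentially by Lemma~\ref{lemma:X growth}, so $\int_\tau^0 q\,ds$ either converges or tends to $+\infty$ as $\tau\to-\infty$. In the first case $W$ converges to a nonzero limit; in the second $W\to 0$. Your decomposition $q=2-(2-q)$ and focus on the sign of the $N$-polynomial in (\ref{eq:qminustwoaltex}) misses this, and as a result you never establish that $W$ actually \emph{converges} --- you only get $\limsup|W|>0$ along a subsequence. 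That is why your first $\a$-limit argument breaks: to contradict the strict monotonicity of $\No\Nt\Nth$ along the vacuum orbit through a Bianchi VIII/IX $\a$-limit point (this is \cite[Lemma~13]{cbu}, and is exactly what the paper invokes), you need every $\a$-limit point to carry the same value of $\No\Nt\Nth$, and that requires the full limit $W\to W_\infty$, not merely a subsequence bounded away from zero.

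Your fallback case-split on $\|N\|$ does not close either. When $\|N\|$ is bounded, the integrability of the $N$-polynomial you assert is not justified by the references you give and is essentially equivalent to what you are trying to prove. When $\|N\|\to\infty$, Lemma~\ref{lemma:BVIIIcontrNiconcl} gives $\No\to 0$ while $\Nt,\Nth\to\infty$, and you correctly observe that $\No\to 0$ with $\Nt\Nth\to\infty$ does not force $\No\Nt\Nth\to 0$; the lemmas of \cite{cbu} you invoke at the end do not settle this. The paper's proof sidesteps both cases: once the dichotomy gives $W\to W_\infty\neq 0$, every $\a$-limit point (whose existence is already secured by Propositions~\ref{prop:alpexBIX} and \ref{prop:alpexBVIII}) has all $N_i$ nonzero; the vacuum orbit through it lies in the $\a$-limit set, along which $\No\Nt\Nth$ is strictly monotone yet must equal the constant $W_\infty$ at every point --- a contradiction.
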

\begin{proof}
  Note that $(\No\Nt\Nth)'=3q\No\Nt\Nth$. Since $\Omega$ decays exponentially, there are two possibilities. Either $q$ is integrable to the past, and
  $\No\Nt\Nth$ converges to a non-zero number, or $q$ is non-integrable, and $\No\Nt\Nth$ converges to zero as $\tau\rightarrow-\infty$. What remains
  is thus to exclude the first possibility. Due to Propositions~\ref{prop:alpexBIX} and \ref{prop:alpexBVIII}, there is an $\a$-limit point, say
  $x_{*}$, of the solution to the modified Wainwright-Hsu equations. Assuming $\No\Nt\Nth$ to converge to a non-zero number, $x_{*}$ is of Bianchi type
  $\mfT\in\{\mrVIII,\mrIX\}$. Moreover, due to (\ref{eq:u phi one phi zero limit}), $x_*$ are effectively initial data for the vacuum version
  of the Wainwright-Hsu equations studied in \cite{cbu}. Moreover, applying the flow of the modified Wainwright-Hsu equations to $x_{*}$ yields a
  Bianchi type $\mfT$ vacuum solution, the closure of whose range is contained in the the $\a$-limit set of the original solution. However, since
  $\No\Nt\Nth$ is strictly monotonic along the vacuum flow (see the proof of \cite[Lemma~13, p.~722]{cbu} for a justification), we get a contradiction to the fact that the product
  $\No\Nt\Nth$ converges for the original solution. This means that $q$ is not integrable to the past, and the desired conclusion follows. 
\end{proof}
Next, we generalise \cite[Theorem~3, p.~726]{cbu} to the present setting. 
\begin{prop}
  Consider a Bianchi type $\mfT\in\{\mrVIII,\mrIX\}$ development satisfying the conditions of Lemma~\ref{lemma:BianchiAdevelopment}. Assume it to be non-LRS
  and with two $N_{i}$ positive and one negative in case $\mfT=\mrVIII$ and with all $N_i$ positive in case $\mfT=\mrIX$. Assume that $V\in\mfP_{\a_V}^{1}$,
  where $\a_V\in (0,1/3)$, and that (\ref{eq:Omegaphiprstrongdec}) holds. Then the corresponding solution to the modified Wainwright-Hsu equations has an $\a$-limit
  point, say $x_{*}=(0,\phi_\infty,0,\sigma_{+},\sigma_{-},n_{1},n_{2},n_{3})$, such that $(\sigma_{+},\sigma_{-})$ is not a special point on the Kasner circle and such that
  $n_{1}n_{2}n_{3}=0$.
\end{prop}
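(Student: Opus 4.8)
The plan is to follow the strategy of \cite[Theorem~3, p.~726]{cbu}, adapted to the presence of the non-linear scalar field via the modified Wainwright-Hsu equations and the decay estimates established above. By Propositions~\ref{prop:alpexBIX} and \ref{prop:alpexBVIII}, the solution to the modified Wainwright-Hsu equations has an $\a$-limit point; by (\ref{eq:u phi one phi zero limit}) any such point has the form $x_{*}=(0,\phi_\infty,0,\sigma_{+},\sigma_{-},n_{1},n_{2},n_{3})$, so the scalar field components are already pinned down and the remaining analysis takes place effectively in the vacuum Wainwright-Hsu phase space studied in \cite{cbu}. Moreover, Lemma~\ref{lemma:NoNtNthconvtozero} gives $(\No\Nt\Nth)(\tau)\rightarrow 0$ as $\tau\rightarrow-\infty$, so that $n_1n_2n_3=0$ automatically for \emph{every} $\a$-limit point. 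The content of the proposition is therefore the extra assertion that one can choose an $\a$-limit point at which $(\sigma_{+},\sigma_{-})$ is \emph{not} a special (Taub) point.

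First I would argue by contradiction: suppose that every $\a$-limit point $x_*$ of the modified flow has $(\sigma_+,\sigma_-)$ equal to one of the three special points. Since the $\a$-limit set is connected and compact and is contained in the (relatively closed) set $\{n_1n_2n_3=0\}\cap\{u=0,\phi_1=0\}$, and since the special points are isolated in the Kasner circle, connectedness would force the projection of the $\a$-limit set to the $\Sp\Sm$-plane to be a single special point, say $(-1,0)$ after applying a $\psi_\sigma^+\in\Gamma^{+,\roev}$. Then $(\Sp,\Sm)(\tau)\rightarrow(-1,0)$ as $\tau\rightarrow-\infty$, and Proposition~\ref{prop:limitcharsp} (with Remark~\ref{remark:limitcharsp}) implies the solution lies in the invariant LRS set $\Sm=0$, $\Nt=\Nth$. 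This contradicts the standing hypothesis that the development is non-LRS. Hence at least one $\a$-limit point has $(\sigma_+,\sigma_-)$ non-special.

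The delicate point, and the one I expect to be the main obstacle, is making the connectedness-of-the-$\a$-limit-set argument rigorous in the modified (non-autonomous-looking but genuinely autonomous after introducing $u$) system, because a priori the $\a$-limit set could contain special points \emph{together with} non-special ones — which is exactly fine for us, but one must be careful that the dichotomy ``either all limit points are special or there is a non-special one'' is handled without circular reasoning. The clean way is: if the $\a$-limit set contains \emph{any} point that is not a special point, we are done; otherwise the $\a$-limit set consists only of special points, and then, because the $\Sp\Sm$-projection of a connected set that meets only the three isolated special points must be a single special point, we run the argument of the previous paragraph. One must also check that the closure of the orbit through $x_*$ under the modified flow stays in the $\a$-limit set (standard for flows) and that on $\{u=0,\phi_1=0\}$ this flow reduces to the vacuum Wainwright-Hsu flow of \cite{cbu}, which is where the non-special special-point structure of Kasner dynamics is used — here I would simply cite the relevant monotonicity and Kasner-map properties from \cite{cbu}.

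Finally I would assemble the pieces: the existence of an $\a$-limit point comes from Propositions~\ref{prop:alpexBIX}--\ref{prop:alpexBVIII}; its form $(0,\phi_\infty,0,\sigma_+,\sigma_-,n_1,n_2,n_3)$ from (\ref{eq:u phi one phi zero limit}); the condition $n_1n_2n_3=0$ from Lemma~\ref{lemma:NoNtNthconvtozero}; and the non-special property of $(\sigma_+,\sigma_-)$ from the contradiction with the non-LRS hypothesis via Proposition~\ref{prop:limitcharsp}. The only genuinely new ingredient beyond \cite{cbu} is the bookkeeping of the scalar-field variables, which is harmless since they converge, and the replacement of $\theta$ by $u=1/\theta$, which was already set up in the discussion preceding Proposition~\ref{prop:alpexBIX}.
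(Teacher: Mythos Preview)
Your argument has a genuine gap at the heart of the connectedness step. The $\a$-limit set of the full orbit in $\rn{8}$ is compact and connected only if the backward orbit is precompact, and precisely that is \emph{not} known here: for Bianchi types VIII and IX the constraint does not bound all $N_i$, and $\|N(\tau)\|$ may well diverge along sequences $\tau_k\to-\infty$. This is exactly why Propositions~\ref{prop:alpexBIX} and \ref{prop:alpexBVIII} had to work so hard merely to produce \emph{one} $\a$-limit point. Consequently, even if every $\a$-limit point of the full orbit projected to a special point, it would not follow that $(\Sp,\Sm)(\tau)$ converges: there could be sequences $\tau_k\to-\infty$ with $(\Sp,\Sm)(\tau_k)$ accumulating at a non-special point while $\|N(\tau_k)\|\to\infty$, so that no full-orbit $\a$-limit point arises from that sequence. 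Your jump from ``the $\a$-limit set projects to a single special point'' to ``$(\Sp,\Sm)(\tau)\to(-1,0)$'' is therefore unjustified.

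The paper's proof avoids this by a constructive argument rather than a topological one. Starting from an $\a$-limit point $y_*$ with $(\sigma_+,\sigma_-)$ special (say $(-1,0)$), it uses the non-LRS hypothesis and Proposition~\ref{prop:limitcharsp} only to conclude that $(\Sp,\Sm)$ does \emph{not} converge to $(-1,0)$. From this it extracts explicit times $s_k\le\tau_k$ at which $(\Sp,\Sm)$ sits on the boundary of a small ball about $(-1,0)$. If $\|N(s_k)\|$ stays bounded on a subsequence, one immediately gets a non-special $\a$-limit point. The substantive case is when $\|N(s_k)\|\to\infty$; then the proof invokes Lemmas~\ref{lemma:bounding Sp BIX contr arg} (type IX) and \ref{eq:Spdoesnotconvtomone} (type VIII) together with the evolution of $\Nt/\Nth$ to manufacture a further sequence at which $(\Sp,\Sm)$ is bounded away from all special points \emph{and} the $N_i$ are controlled. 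Your proposal bypasses this entire mechanism, and it is this mechanism that carries the weight of the argument.
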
  
\begin{proof}
  Let $Y_{*}=(0,\phi_\infty,0,s_{+},s_{-},\bn_{1},\bn_{2},\bn_{3})$ be an $\a$-limit point of the solution to the modified Wainwright-Hsu equations; see
  (\ref{eq:u phi one phi zero limit}) and Propositions~\ref{prop:alpexBIX} and \ref{prop:alpexBVIII}. Due to (\ref{eq:u phi one phi zero limit}), we, in what follows,
  focus on $y_{*}=(s_{+},s_{-},\bn_{1},\bn_{2},\bn_{3})$. Note that
  $\bn_{1}\bn_{2}\bn_{3}=0$ due to Lemma~\ref{lemma:NoNtNthconvtozero}. If $(s_{+},s_{-})$ is not a special point, we are thus done. Assume therefore,
  without loss of generality, that $(s_{+},s_{-})=(-1,0)$ (note that, in the present proposition, we do not assume $\No<0$ when $\mfT=\mrVIII$). Since the
  solution is not LRS, we cannot have $(\Sp,\Sm)\rightarrow (-1,0)$; see Proposition~\ref{prop:limitcharsp}. Let $\{\tau_{k}\}$ be a time sequence with
  $\tau_{k}\rightarrow-\infty$ such that $(\Sp,\Sm,\No,\Nt,\Nth)(\tau_{k})\rightarrow y_{*}$. Then there is an $0<\e<1/1000$ and a
  sequence $\{s_{k}\}$ with $s_{k}\leq \tau_{k}$ such that $(\Sp(s_{k}),\Sm(s_{k}))$ converges to the boundary of the ball of radius $\e$ and center $(-1,0)$.
  Moreover, $s_{k}$ can be assumed to
  be the first time $(\Sp,\Sm)$ reaches the boundary after $\tau_{k}$. If there is a subsequence of $N(s_{k})$ which is bounded (where
  $N=(\No,\Nt,\Nth)$), we can extract a convergent subsequence of the desired type. We can therefore assume that $\|N(s_{k})\|$ tends to infinity.
  As in the proof of \cite[Lemma~15]{cbu}, we can assume that two $N_{i}(s_{k})$ tend to infinity. On the other hand,
  it is, due to (\ref{eq:nip}) and (\ref{eq:fodef}), clear that $|\No|$ decays from $\tau_{k}$ to $s_{k}$ (note that $q-4\Sp\approx 6$ in the ball
  of radius $\e$ and center $(-1,0)$). This means that $\No<0$ if $\mfT=\mrVIII$. We can thus assume $\No<0$ and $\Nt,\Nth>0$ in the case of
  Bianchi type VIII. Moreover, $\Nt(s_k),\Nth(s_k)\rightarrow\infty$, $(\No\Nt)(s_k)\rightarrow 0$, $(\No\Nth)(s_k)\rightarrow 0$, irrespective of
  Bianchi type. 

  \textit{Bianchi type VIII.} Next, let $\e_{0}>0$ be the constant appearing in the statement of Lemma~\ref{eq:Spdoesnotconvtomone} and let $0<\eta<\e_{0}$
  be such that $\Sp(s_{k})\geq -1+2\eta$. Then there are corresponding constants $A$ and $T$ as in the statement of Lemma~\ref{eq:Spdoesnotconvtomone}.
  We can, without loss of generality, assume $A\geq 1$. We can also, by removing elements from the sequence if necessary, assume that $s_{k}\leq T$
  for all $k$ and that $\Nth(s_{k})\geq 100A$ for all $k$. By choosing $A$ appropriately, we can also assume that there is a sequence
  $v_{k}\leq s_{k}$ such that $\Nth(v_{k})=10A$ and $\Nth(\tau)\geq 10A$ in $[v_{k},s_{k}]$; the reason for this is that we know that there is
  an $\a$-limit point. Moreover, for $k$ large enough, we can assume that there are sequences $u_{k}$ and $t_{k}$ with
  $v_{k}\leq u_{k}\leq t_{k}\leq s_{k}$ such that $\Nth(t_{k})=10^{20}A$, $\Nth(u_{k})=10^{10}A$ and $10^{10}A\leq \Nth(\tau)\leq 10^{20}A$ for all
  $\tau\in [u_{k},t_{k}]$. Since $|\Nt-\Nth|\leq 2$ due to the constraint, this means that $|\Nt/\Nth-1|\leq 2\cdot 10^{-10}A^{-1}$ on
  $[u_{k},t_{k}]$. Next, since $q+2\Sp-2\sqrt{3}\Sm\leq 6$ (due to the constraint and the fact that $q\leq 2$),
  \[
  10\leq \ln\tfrac{\Nth(t_{k})}{\Nth(u_{k})}=\textstyle{\int}_{u_{k}}^{t_{k}}(q+2\Sp-2\sqrt{3}\Sm)(\tau)d\tau\leq 6(t_{k}-u_{k}).
  \]
  Thus $t_{k}-u_{k}\geq 1$. Finally, assume that $|\Sm|\geq 1/10$ in $[u_{k},t_{k}]$. Then
  \begin{equation}\label{eq:intSmuktk}
    \big|\textstyle{\int}_{u_{k}}^{t_{k}}4\sqrt{3}\Sm(\tau)d\tau\big|\geq \tfrac{2\sqrt{3}}{5}.    
  \end{equation}
  On the other hand
  \begin{equation}\label{eq:NtdbNthevol}
    \tfrac{\Nt(t_{k})}{\Nth(t_{k})}=\exp\left(\textstyle{\int}_{u_{k}}^{t_{k}}4\sqrt{3}\Sm(\tau)d\tau\right)\tfrac{\Nt(u_{k})}{\Nth(u_{k})}.
  \end{equation}
  However, (\ref{eq:intSmuktk}) and (\ref{eq:NtdbNthevol}) are not consistent with the fact that $|\Nt/\Nth-1|\leq 2\cdot 10^{-10}A^{-1}$ on
  $[u_{k},t_{k}]$. To conclude, there must thus be a sequence $r_{k}\in [u_{k},t_{k}]$ such that $|\Sm(r_{k})|\leq 1/10$, $\Sp(r_{k})\geq -1+\eta$
  and $|N_{i}(r_{k})|\leq 10^{20}A+2$. Taking an appropriate subsequence of $r_{k}$ yields the desired $\a$-limit point (note that the statement
  that $n_{1}n_{2}n_{3}=0$ follows from Lemma~\ref{lemma:NoNtNthconvtozero}).

  \textit{Bianchi type IX.} In this case there is, due to the constraint and the fact that $[\No(\Nt+\Nth)](s_k)\rightarrow 0$, an $\eta>0$ such that
  \[
  -1+\eta+\tfrac{C_\Omega}{1-3\a_V}e^{2(1-\a_V)s_k}<\Sp(s_k)\leq 0
  \]
  for $k$ large enough. Moreover, $9\No(s_k)<\Nt(s_k)+\Nth(s_k)$ for $k$ large enough. If the latter inequality holds for all $\tau\leq s_k$, then
  Lemma~\ref{lemma:bounding Sp BIX contr arg} implies that $\Sp(\tau)\geq -1+\eta$ for all $\tau\leq s_k$, contradicting the assumption that there
  is a sequence tending to $-\infty$ along which $\Sp$ converges to $-1$. Given this observation the rest of the proof can essentially be taken
  verbatim from the proof of \cite[Theorem~3, p.~726]{cbu}. 
\end{proof}

\begin{cor}\label{cor:BianchiVIIIvacuumas}
  Consider a Bianchi type VIII or IX development satisfying the conditions of Lemma~\ref{lemma:BianchiAdevelopment}. Assume that it is not LRS. Assume that
  $V\in\mfP_{\a_V}^{1}$, where $\a_V\in (0,1/3)$, and that (\ref{eq:Omegaphiprstrongdec}) holds.
  Then the $\a$-limit set of the solution contains the closure of a Bianchi type II vacuum orbit. In particular, it contains at least two points
  on the Kasner circle (with the $N_{i}$ all vanishing), at least one of which is non-special. 
\end{cor}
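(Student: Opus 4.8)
The plan is to follow the strategy of \cite{cbu} adapted to the modified Wainwright-Hsu equations. By the preceding proposition we already have an $\a$-limit point $x_*=(0,\phi_\infty,0,\sigma_+,\sigma_-,n_1,n_2,n_3)$ with $(\sigma_+,\sigma_-)$ non-special on the Kasner circle and $n_1n_2n_3=0$. Since $\|N\|$ is bounded along some time sequence converging to $-\infty$ (otherwise we could not have an $\a$-limit point with finite $N$-components), and since $\Sp,\Sm,u,\phi_0,\phi_1$ are all bounded to the past (by (\ref{eq:constraint}), Lemma~\ref{lemma:X growth} and (\ref{eq:u phi one phi zero limit})), the $\a$-limit set is a non-empty compact connected invariant set of the modified Wainwright-Hsu equations. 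First I would observe that since $u\to 0$ and $\phi_1\to 0$ as $\tau\to-\infty$ (by (\ref{eq:u phi one phi zero limit}); this uses (\ref{eq:Omegaphiprstrongdec}) and Lemma~\ref{lemma:X growth}), the entire $\a$-limit set lies in the slice $\{u=0,\phi_1=0\}$, on which the modified Wainwright-Hsu equations restrict to exactly the vacuum Wainwright-Hsu equations studied in \cite{cbu}. Therefore the $\a$-limit set is an invariant set for the vacuum flow.

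Next I would identify the orbit through $x_*$ under the vacuum flow. Since $(\sigma_+,\sigma_-)$ is non-special and exactly one of the $n_i$ is nonzero (because $n_1n_2n_3=0$ but $(\sigma_+,\sigma_-)$ being a non-special Kasner point forces, via the vacuum constraint restricted to $\{u=0,\phi_1=0\}$, that not all $n_i$ vanish — if all vanished $x_*$ would be a Kasner point, which is a fixed point, and then the orbit would be trivial; but the non-special Kasner points with one $N_i$ nonzero are precisely the initial conditions for Bianchi type II vacuum orbits), the vacuum orbit through $x_*$ is a Bianchi type II vacuum orbit. I would then invoke the standard fact (see the analysis in \cite{cbu} and the Bianchi type II vacuum solution, which is the Taub/Kasner transition) that such an orbit has as its own $\a$- and $\omega$-limit points two distinct points on the Kasner circle, at least one of which is non-special; concretely, the Bianchi type II vacuum flow connects a Kasner point to its image under the Kasner map along a one-dimensional heteroclinic orbit, and the closure of this orbit contains both endpoints. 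Since the $\a$-limit set of the original solution is closed and invariant and contains $x_*$, it contains the closure of the entire vacuum orbit through $x_*$, hence contains these two Kasner points.

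To finish, I would record that both endpoints lie on the Kasner circle with all $N_i=0$ (Kasner points in these variables have $N_i=0$), that they are distinct, and that at least one is non-special — this last point follows because the Kasner map sends a special point to a special point, so if one endpoint were special both would be, contradicting that the orbit is a genuine (non-constant) Bianchi type II orbit emanating from or limiting to the non-special point $(\sigma_+,\sigma_-)$. This establishes the final assertion: the $\a$-limit set contains the closure of a Bianchi type II vacuum orbit, in particular at least two points on the Kasner circle, at least one non-special, and \emph{a fortiori} (taking the line segment swept out by the family of such orbits, as in the statement of Theorem~\ref{thm:dev inducing data on the sing}) the expansion normalised Weingarten map $\mK$ cannot converge. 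The main obstacle I anticipate is justifying carefully that the $\a$-limit set genuinely meets the slice $\{u=0,\phi_1=0\}$ and that the restricted dynamics there is exactly the vacuum Wainwright-Hsu system — this requires the exponential decay of $\Omega$ and $\phi'$ from (\ref{eq:Omegaphiprstrongdec}) together with Lemma~\ref{lemma:X growth}, and some care that the $\a$-limit point produced by the previous proposition indeed has all the structure claimed (non-special Kasner point, $n_1n_2n_3=0$); the heteroclinic structure of Bianchi type II vacuum orbits itself is classical and can be cited directly from \cite{cbu}.
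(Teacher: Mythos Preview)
Your overall strategy—pass to the vacuum slice $\{u=0,\phi_1=0\}$, use invariance of the $\alpha$-limit set, and follow the vacuum orbit through $x_*$—is the right framework, and it matches what the paper does. But the argument breaks down at the step where you claim the vacuum orbit through $x_*$ is a Bianchi type~II orbit.

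The preceding proposition gives you $x_*$ with $n_1n_2n_3=0$ and $(\sigma_+,\sigma_-)$ \emph{not equal to a special point}; it does \emph{not} say $(\sigma_+,\sigma_-)$ lies on the Kasner circle, nor that exactly one $n_i$ is nonzero. Your parenthetical reasoning is backwards: if $(\sigma_+,\sigma_-)$ were on the Kasner circle with $u=\phi_1=0$, the constraint forces \emph{all} $n_i=0$, not ``not all vanish''. And from $n_1n_2n_3=0$ you only get that at least one $n_i$ is zero—two may survive, putting $x_*$ on a type~$\mrVIz$ or~$\mrVIIz$ vacuum orbit, not a type~II one. Most importantly, the case you try to dismiss—$x_*$ a non-special Kasner fixed point with all $n_i=0$—is perfectly possible and is in fact the central case. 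The orbit through such $x_*$ is trivial, so invariance alone gives you nothing further.

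The missing ingredient is precisely what the paper cites from \cite{BianchiIXattr}: at a non-special Kasner point exactly one direction $N_i$ is unstable for the backward flow (the one with $p_i<0$), and since your Bianchi~VIII/IX, non-LRS solution has all $N_i\neq 0$, each time it visits a neighbourhood of that Kasner point it must shadow the corresponding Bianchi~II heteroclinic orbit as it leaves. A compactness/subsequence argument then puts the entire type~II orbit into the $\alpha$-limit set. This is a genuine local hyperbolicity/shadowing argument, not a consequence of invariance, and it is what you need to supply (or cite). The paper handles it in two strokes: \cite[Lemma~14]{cbu} to first land on a non-special point of the Kasner circle, then \cite[Proposition~6.1]{BianchiIXattr} for the instability argument.
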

\begin{proof}
  Due to the proof of \cite[Lemma~14, p.~723]{cbu}, we know that there is a non-special $\a$-limit point on the Kasner circle. By an argument which is
  essentially identical to the proof of \cite[Proposition~6.1, p.~421]{BianchiIXattr}, it follows that the $\a$-limit set of the solution contains the
  closure of a Bianchi type II vacuum orbit.
\end{proof}

\chapter{From asymptotics to data on the singularity}\label{chapter:as to data on sing}

In the last two chapters, we derive the asymptotic behaviour of solutions in terms of the Wainwright Hsu variables, both in the matter and in the vacuum
dominated settings. Here we use this information to conclude that non-oscillatory solutions induce data on the singularity. 

\section{The matter dominated case}

We begin by considering the matter dominated case. 
\begin{prop}\label{prop:matter dom data on sing}
  Consider a development satisfying the conditions of Lemma~\ref{lemma:BianchiAdevelopment} which is not an isotropic Bianchi type I development. Assume
  that $V\in \mfP_{\a_V}^1$, where $\a_V\in (0,1)$ in case of Bianchi type I and non-LRS Bianchi type II; and $\a_V\in (0,1/3)$ otherwise. Assume that
  the development is matter dominated, i.e. that $\phi_t/\theta$ converges to a non-zero limit in the direction of the singularity; see
  Definition~\ref{def:matter and vacuum dominated}. Then the development induces data on the singularity; see Definition~\ref{def:ind data on sing}. 
\end{prop}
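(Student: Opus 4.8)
The plan is to reduce the statement entirely to the asymptotic expansions already available from Theorem~\ref{thm:dichotomy}, and then to check that the resulting limits satisfy the requirements of Definition~\ref{def:ind data on sing}. Since the development is matter dominated, we are in the second alternative of Theorem~\ref{thm:dichotomy}, so (\ref{seq:SpmphiprNilimmatterdom}) holds: the $\Sigma_\pm$, $\phi'$, the $N_i$ and $\ln\theta+3\tau$ all converge exponentially to limits $\sigma_\pm$, $\Phi_1\neq 0$, $0$, $\ln\theta_\infty$ as $\tau\to-\infty$, and $\Omega(\tau)\leq C_\infty e^{6(1-\a_V)\tau}$. First I would translate these $\tau$-asymptotics into $t$-asymptotics. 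From (\ref{eq:dtdtau}) and (\ref{eq:BIlnthetaas})-type estimates (i.e.\ $\ln\theta(\tau)=-3\tau+\ln\theta_\infty+O(e^{\a_\infty\tau})$) one gets $dt/d\tau=3/\theta\sim (3/\theta_\infty)e^{3\tau}$, hence $t\sim (1/\theta_\infty)e^{3\tau}$; in particular $\tau\to-\infty$ corresponds to $t\downarrow 0$ (so $t_->-\infty$ and we may normalise $t_-=0$), and $e^{\a_\infty\tau}$ is comparable to a power $t^{\delta}$ with $\delta=\a_\infty/3>0$. The mean curvature then satisfies $\theta(t)\to\infty$ as $t\downarrow 0$, so the development has a crushing singularity, which is the hypothesis needed to even speak of data on the singularity.

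Next I would identify the limit data $(G,\msH,\msK,\Phi_0,\Phi_1)$. The expansion normalised Weingarten map is $\mK = \mathrm{diag}(\Sigma_1,\Sigma_2,\Sigma_3)+\tfrac13\roId$ in the orthonormal eigenframe $\{e_i'\}$ used in the proof of Proposition~\ref{prop:unique max dev}; by (\ref{eq:SigmaiitoSpm}) (read backwards) the $\Sigma_i$ are linear in $\Sigma_\pm$, so $\mK(\tau)$ converges to a constant symmetric $(1,1)$-tensor $\msK$ with $\tr\msK=1$. The relation $\tr\msK^2+\Phi_1^2=1$ follows by taking the limit of the Hamiltonian constraint (\ref{eq:constraint}) together with (\ref{eq:Sigma sq}) and the fact that $\Psi=\tfrac16\phi_1^2+3V\!\circ\!\phi_0/\theta^2\to\tfrac16\Phi_1^2$ (using $\Omega\to 0$); that $\mathrm{div}_{\msH}\msK=0$ will come from the limit of the momentum-constraint/commutation relation, since the $N_i\to 0$. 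For $\theta^{-1}\phi_t$: note $\phi_t = \phi'\,d\tau/dt = \phi'\theta/3$, so $\theta^{-1}\phi_t = \phi'/3\to \Phi_1/3$; so the ``$\Phi_1$'' of Definition~\ref{def:ndvacidonbbssh} is $\Phi_1/3$ in this notation (I would just relabel consistently — the squared constraint then reads $\tr\msK^2+(\Phi_1/3)^2\cdot(\text{const})=1$, matching because $\Psi\to\tfrac16\Phi_1^2$ and $\sigma_+^2+\sigma_-^2=\tfrac32\sum\Sigma_i^2$). For $\Phi_0$: from (\ref{eq:phiprphilimmatterdom}) we have $\phi(\tau)=\Phi_1\tau+\Phi_0+O(e^{\a_\infty\tau})$, and since $\ln\theta = -3\tau+\ln\theta_\infty+O(e^{\a_\infty\tau})$ we get $\tau = -\tfrac13(\ln\theta-\ln\theta_\infty)+O(e^{\a_\infty\tau})$, whence $\phi+\theta^{-1}\phi_t\ln\theta = \phi+\tfrac13\phi_1\ln\theta$ converges; the limit is the required $\Phi_0$ (up to the explicit additive constant $\tfrac13\Phi_1\ln\theta_\infty$, which I would absorb into the definition of the data).

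Then I would verify the metric limit (\ref{eq:bAmetriclimit}). In the frame $\{e_i'\}$ the spatial metric is $\bge(t) = \sum_i a_i^2(t)\,\xi^i\otimes\xi^i$, and by (\ref{eq:for the ai}) $\d_t a_i = (\sigma_{ii}+\theta/3)a_i = (\Sigma_i+\tfrac13)\theta a_i$, i.e.\ $d\ln a_i/d\tau = 3(\Sigma_i+\tfrac13) = 3p_i+3(\Sigma_i-\sigma_i)$ where $p_i:=\sigma_i+\tfrac13$ are the eigenvalues of $\msK$. Since $\Sigma_i-\sigma_i = O(e^{\a_\infty\tau})$ is integrable, $\ln a_i(\tau) = 3p_i\tau + c_i + O(e^{\a_\infty\tau})$ for constants $c_i$, so $a_i(\tau)\sim e^{c_i}e^{3p_i\tau} = e^{c_i}(\theta/\theta_\infty)^{-p_i}$; that is, $\theta^{p_i}a_i$ converges to a positive constant. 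On the other hand $\bge(\theta^{\mK}v,\theta^{\mK}w)$ is, in the eigenbasis, $\sum_i \theta^{2p_i}a_i^2\,\xi^i(v)\xi^i(w)$, which therefore converges to $\msH(v,w):=\sum_i (\lim \theta^{p_i}a_i)^2\,\xi^i\otimes\xi^i$ — a fixed left invariant Riemannian metric. One then checks $\msK$ is symmetric with respect to this $\msH$ (immediate, it is diagonal in the $\xi^i$-basis) and that the structure-of-$\mfg$ conditions (3) and (4) of Definition~\ref{def:ndvacidonbbssh} hold; condition (4) is vacuous here since $1$ cannot be an eigenvalue of $\msK$ in the matter dominated case (as $\Phi_1\neq 0$ forces $\Psi_\infty>0$, so $\sum\Sigma_i^2<\tfrac23$, ruling out a Kasner point), and condition (3) follows because $f_i(2,\sigma_+,\sigma_-)>0$ whenever $N_i\neq 0$ (the last clause of Theorem~\ref{thm:dichotomy}), which forces the relevant structure constants to vanish — exactly the algebraic fact already used in Lemma~\ref{lemma:x arises from Iinf}. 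The main obstacle I anticipate is bookkeeping: matching the normalisations (the factor-of-$3$ between $\phi_t/\theta$ and $\phi'$, the additive $\ln\theta_\infty$ constants in $\Phi_0$, and the constant rescaling of $\msH$) so that the limits land exactly in $\mS$ rather than merely ``in $\mS$ up to an explicit isometry/rescaling'', together with cleanly extracting condition~(3) from the positivity of the $f_i$; the analytic content is entirely furnished by Theorem~\ref{thm:dichotomy} and requires no new estimate.
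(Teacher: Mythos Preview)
Your approach is the same as the paper's: invoke the matter dominated alternative of Theorem~\ref{thm:dichotomy} and read off the limits of $\mK$, $\theta^{-1}\phi_t$, $\phi+\theta^{-1}\phi_t\ln\theta$ and $\bge(\theta^{\mK}\cdot,\theta^{\mK}\cdot)$ from (\ref{seq:SpmphiprNilimmatterdom}), then check the axioms of Definition~\ref{def:ndvacidonbbssh} and Definition~\ref{def:ind data on sing}. The analytic content is handled correctly.

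There is, however, one genuine gap that is not mere bookkeeping. Definition~\ref{def:ind data on sing} requires $\mfI_\infty\in\mS$, and $\mS$ (Definition~\ref{def:sets of id on singularity}) explicitly \emph{excludes} isotropic and LRS Bianchi type $\mrVIIz$ data on the singularity. Nothing you have done rules this out: even though the original data $\mfI\in\mB[V]$ are not iso/LRS $\mrVIIz$, the asymptotic data could a priori acquire the extra symmetry (e.g.\ a generic $\mrVIIz$ solution with $\No=0$ could have $\sigma_-=0$ and $\bn_2=\bn_3$ in the limit). The paper devotes the entire second half of its proof to this point: assuming the data on the singularity are iso/LRS $\mrVIIz$, one shows that $\Nt/\Nth\to 1$ and $\Sp$, $\phi'$ converge, whence Proposition~\ref{prop:LRS BVIIz as char} forces $\Sm\equiv 0$ and $\Nt\equiv\Nth$ for the whole solution --- so the original data were iso/LRS $\mrVIIz$ after all, contradicting $\mfI\in\mB[V]$. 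This is a rigidity argument, not a normalisation issue, and it relies on a proposition you have not invoked.

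One minor correction: your justification of $\rodiv_{\msH}\msK=0$ (``since $N_i\to 0$'') is off. The commutator matrix of the fixed Lie algebra frame $\{e_i'\}$ is $\bn$, which does not vanish; what gives the divergence condition is that both $\bn$ and $\msK$ are diagonal in this frame, hence commute.
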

\begin{proof}
  In what follows, we make the assumptions and use the notation introduced in the proof of Proposition~\ref{prop:unique max dev}. Note, to begin with, that
  the $e_i'$ are eigenvector fields of the expansion normalised Weingarten map of the leaves of the foliation. The corresponding eigenvalues are
  $\ell_i:=\Sigma_i+1/3$. Next, note that (\ref{eq:for the ai}) holds. Recalling (\ref{eq:dtdtau}), this means that
  \begin{equation}\label{eq:dai dtau}
    \tfrac{da_i}{d\tau}=3\ell_ia_i
  \end{equation}
  and $a_i(0)=1$, where we have changed to $\tau$-time and we fix the translation ambiguity in $\tau$ by demanding that $\tau(t_0)=0$. Thus
  \[
  \theta^{\ell_i}a_i=\theta(0)^{\ell_i(\tau)}\exp\left(-\ell_i(\tau)\textstyle{\int}_0^\tau[1+q(s)]ds+\int_{0}^{\tau}3\ell_i(s) ds\right).
  \]
  Due to (\ref{eq:altqdef}) and (\ref{seq:SpmphiprNilimmatterdom}), it is clear that $q-2$ decays exponentially and that the difference between
  $\ell_i$ and its limit decays exponentially. This means that $\theta^{\ell_i}a_i$ converges to a strictly positive limit. In other words, there
  is a Riemannian metric $\msH$ such that (\ref{eq:bAmetriclimit}) holds. Since the $\ell_i$ converge, it is also clear that the expansion normalised
  Weingarten map converges to a limit, say $\msK$. Next, due to (\ref{seq:SpmphiprNilimmatterdom}), it is clear that $\theta^{-1}\phi_t$ converges
  to a limit, say $\Phi_1$ and that $\phi+\theta^{-1}\phi_t\ln\theta$ converges to a limit, say $\Phi_0$.

  That $\rotr\msK=1$ is an immediate consequence of the fact that $\mK\rightarrow\msK$. Since $\msK$ and $\msH$ are both diagonal with respect to
  $\{e_i'\}$, it is clear that $\msK$ is symmetric with respect to $\msH$. Since the commutator matrix associated with $\{e_i'\}$ is diagonal
  and the $e_i'$ are eigenvectors of $\msK$, it is clear that $\rodiv_{\msH}\msK=0$. That $\tr\msK^2+\Phi_1^2=1$ follows by taking the limit of
  (\ref{eq:constraint}), keeping (\ref{seq:SpmphiprNilimmatterdom}) in mind. Finally, since $\Phi_1\neq 0$, it is clear that $1$ cannot be an
  eigenvalue of $\msK$. On the other hand, due to Theorem~\ref{thm:dichotomy}, if $N_i\neq 0$, then $f_i(2,\sigma_+,\sigma_-)>0$. This means that
  condition 3 of Definition~\ref{def:ndvacidonbbssh} is fulfilled.

  It remains to be verified that the data on the singularity are neither of isotropic nor of LRS Bianchi type $\mrVIIz$; cf.
  Definition~\ref{def:sets of id on singularity}. Assume, to this end, that they are. This means, in particular, that the original regular initial data $\mfI$
  are of Bianchi type $\mrVIIz$. Moreover, there is an orthonormal frame $\{e_{i}\}$ of $\mfg$ with respect to $\msH$ and a family $\Psi_t$, $t\in\ro$, of Lie
  algebra isomorphisms such that (\ref{eq:Psit definition}) and $\msK\Psi_t=\Psi_t\msK$ both hold for all $t$. In the LRS setting, this is true by definition. In
  the isotropic setting, $\msK=\roId/3$, so that $\msK$ commutes with any Lie algebra isomorphism. As noted in the proof of
  Lemma~\ref{lemma:psi sigma plus isotropic}, there is only one way for $\msH$ to have constant curvature; given an orthonormal frame $\{\be_i\}$ of $\mfg$
  with respect to $\msH$ such that the corresponding commutator matrix, say $\bn$, is diagonal with $\bn=\rodiag(0,\bn_2,\bn_3)$, then $\bn_2$ and $\bn_3$
  have to equal. This means that, defining $\Psi_t$ by (\ref{eq:Psit definition}) with $e_i$ replaced by $\be_i$, $\Psi_t$ is a family of Lie algebra
  isomorphisms and $\msK\Psi_t=\Psi_t\msK$ holds for all $t$. 

  Combining the existence of the frame $\{e_i\}$ and the isomorphisms $\Psi_t$ with the fact that $\msK$ is symmetric with respect to $\msH$, it follows that the
  $e_i$ are eigenvector fields of $\msK$. Moreover, if $\msK e_i=p_i e_i$ (no summation), then $p_2=p_3$. Next, let $\nu$ be the commutator matrix associated
  with $\{e_i\}$. Then $\nu=\rodiag(\nu_1,\nu_2,\nu_2)$ due to the fact that the $\Psi_t$ are isomorphisms. Since the Lie group is of Bianchi type $\mrVIIz$, it
  follows that $\nu_1=0$. Concerning the original
  development, we can assume that $\No=0$. In other words, if $n'$ is the commutator matrix associated with $\{e_i'\}$, then $n'=\rodiag(0,n_2',n_3')$. Next,
  let $\a_i$ denote the limit of $\theta^{\ell_i}a_i$ and $\be_i:=\a_i^{-1}e_i'$. Then $\{\be_i\}$ is an orthonormal frame of $\mfg$ with respect to $\msH$. Let
  $\bn$ denote the commutator matrix associated with $\{\be_i\}$. Then $\bn=\rodiag(0,\bn_2,\bn_3)$. Let $A$ be a matrix such that $\be_i=A_i^{\phantom{i}j}e_j$.
  Then, due to the transformation law between commutator matrices associated with orthonormal bases, see \cite[(19.3), p.~207]{RinCauchy},
  \[
    \nu=(\det A)^{-1}A^t\bn A.
  \]
  Note that $A$ is orthogonal, so that $\det A=\pm 1$. On the other hand, exchanging $e_1$ by $-e_1$, if necessary, we can ensure that $\det A=1$. This
  means that $A\nu=\bn A$. From this, it can be deduced that $\nu_2=\bn_2=\bn_3$ and that $\be_1=\pm e_1$. Interchanging $e_1$ and $e_3$ with $-e_1$ and $-e_3$,
  respectively, and interchanging $\Psi_t$ with $\Psi_{-t}$, if necessary, we can assume that $\be_1=e_1$. Moreover, $\be_2$ and $\be_3$ are obtained by applying
  a rotation to $e_2$ and $e_3$. We can therefore, without loss of generality, assume that $\be_i=e_i$. Next, note that if $\{i,j,k\}=\{1,2,3\}$, then  
  \[
  N_k=\tfrac{1}{\theta}\tfrac{a_k}{a_i a_j}n_k'\ \ \ \mathrm{and}\ \ \ \bn_k=\tfrac{\a_k}{\a_i \a_j}n_k'.
  \]
  Since $\bn_2=\bn_3$, it follows that
  \[
    \tfrac{N_2}{N_3}=\tfrac{a_2^2}{a_3^2}\tfrac{n_2'}{n_3'}=\tfrac{a_2^2}{a_3^2}\tfrac{\a_2^{-2}}{\a_3^{-2}}\tfrac{\bn_2}{\bn_3}
    =\tfrac{a_2^2}{a_3^2}\tfrac{\a_2^{-2}}{\a_3^{-2}}. 
  \]
  Since $\theta^{\ell_i}a_i$ converges to $\a_i$, we conclude that
  \begin{equation}\label{eq:Nt Nth quotient limit}
    \lim_{\tau\rightarrow-\infty}\left(\tfrac{N_2}{N_3}\right)(\tau)=\lim_{\tau\rightarrow-\infty}\left(\theta^{2(\ell_3-\ell_2)}
      \tfrac{\theta^{2\ell_2}a_2^2}{\theta^{2\ell_3}a_3^2}\tfrac{\a_2^{-2}}{\a_3^{-2}}\right)(\tau)=\lim_{\tau\rightarrow-\infty}\theta^{2(\ell_3-\ell_2)}(\tau).
  \end{equation}
  Next, note that $\ell_i$ is the eigenvalue of $\mK$ corresponding to the eigenvector $e_i'$. Since $e_i'$ is parallel to $\be_i=e_i$; $\msK e_i=p_i e_i$
  (no summation); and $\mK$ converges to $\msK$, it is clear that $\ell_i$ converges to $p_i$. By the above, we, in addition, know that $\ell_i$
  converges exponentially to its limit and that $p_2=p_3$. This means that $\ell_2-\ell_3$ converges to zero exponentially. On the other hand, $\ln\theta$
  does not grow faster than linearly. Combining these observations with (\ref{eq:Nt Nth quotient limit}) yields the conclusion that $\Nt/\Nth$ converges
  to $1$ in the direction of the singularity. Since $\Sm$ converges to zero (note that $2\Sm=\sqrt{3}(\ell_2-\ell_3)$); $\Sp$ converges to a limit, say
  $\sigma_+$; $\phi'$ converges to a non-zero
  limit, say $\bPhi_1$; $\Omega$ converges to zero; and all the $N_i$ converge to zero, appealing to (\ref{eq:constraint}) yields the conclusion that
  $\sigma_+\in (-1,1)$ and that $\sigma_+^2+\bPhi_1^2/6=1$. Due to the above observations, the conditions of Proposition~\ref{prop:LRS BVIIz as char} are
  satisfied. This means that the original solution is contained in the invariant set given by $\Sm=0$ and $\Nt=\Nth$. This means that $a_2=a_3$, $\ell_2=\ell_3$
  and $\Nt=\Nth$, so that the original initial data are of either isotropic or LRS Bianchi type $\mrVIIz$. However, this contradicts the assumption
  that the original initial data belong to $\mB[V]$. The proposition follows. 
\end{proof}

\begin{prop}\label{prop:ap not plus}
  Assume that the conditions of Lemma~\ref{lemma:BianchiAdevelopment} are not satisfied, but that the conditions of Lemma~\ref{lemma:Bianchi IX remainder}
  are. Assume, moreover, that $V\in \mfP_{\a_V}^1$ for some $\a_V\in (0,1)$. Then the development induces data on the singularity.   
\end{prop}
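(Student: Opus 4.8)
The plan is to reduce Proposition~\ref{prop:ap not plus} to the case already handled in Proposition~\ref{prop:matter dom data on sing} by showing that, under the hypotheses of Lemma~\ref{lemma:Bianchi IX remainder}, the development is of Bianchi type IX, is matter dominated in the sharp sense, and in fact satisfies the detailed asymptotics of the second alternative in Theorem~\ref{thm:dichotomy}. First I would record the setting: by Lemma~\ref{lemma:Bianchi IX remainder} we have $J=(0,t_+)$, $\theta(t)\to\infty$ as $t\downarrow 0$, $\theta_t<0$ and $\theta\notin L^1(0,t_1)$ near the singularity, so the $\tau$-coordinate of Lemma~\ref{lemma:tautimecoord BIX} is available and $(-\infty,0]$ lies in the $\tau$-existence interval. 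Crucially, the second part of Theorem~\ref{thm:dichotomy} applies verbatim (its hypotheses are exactly ``the conditions of Lemma~\ref{lemma:BianchiAdevelopment} are not satisfied, but those of Lemma~\ref{lemma:Bianchi IX remainder} are'' together with $V\in\mfP_{\a_V}^1$, $\a_V\in(0,1)$), so we conclude that the second (matter dominated) alternative holds: the estimates (\ref{seq:SpmphiprNilimmatterdom}) are valid with $\sigma_\pm=0$ and $\Phi_1^2=6$. In particular $\phi_t/\theta\to\Phi_1\neq 0$, all $N_i\to 0$ exponentially, $q-2\to 0$ exponentially, $\Omega\to 0$ exponentially, and $\ln\theta+3\tau$ converges.

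Next I would run the same construction as in the proof of Proposition~\ref{prop:matter dom data on sing}, essentially word for word. Using the frame $\{e_i'\}$ diagonalising the commutator and Weingarten matrices initially, and the functions $a_i$ solving (\ref{eq:for the ai}), i.e. (\ref{eq:dai dtau}), one writes
\[
\theta^{\ell_i}a_i=\theta(0)^{\ell_i(\tau)}\exp\Big(-\ell_i(\tau)\textstyle\int_0^\tau[1+q(s)]\,ds+\int_0^\tau 3\ell_i(s)\,ds\Big),
\]
where $\ell_i:=\Sigma_i+1/3$. Since $q-2$ and $\ell_i-p_i$ (with $p_i$ the limits) decay exponentially by the matter dominated asymptotics, each $\theta^{\ell_i}a_i$ converges to a strictly positive limit $\a_i$; this produces the limiting metric $\msH$ via (\ref{eq:bAmetriclimit}). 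The expansion normalised Weingarten map converges to a diagonal $\msK$ with eigenvalues $p_i$, $\theta^{-1}\phi_t\to\Phi_1$, and $\phi+\theta^{-1}\phi_t\ln\theta\to\Phi_0$. The algebraic conditions of Definition~\ref{def:ndvacidonbbssh} are checked exactly as before: $\tr\msK=1$ from $\mK\to\msK$; symmetry of $\msK$ w.r.t.\ $\msH$ and $\rodiv_{\msH}\msK=0$ from simultaneous diagonality of the commutator matrix and $\msK$; $\tr\msK^2+\Phi_1^2=1$ by taking the limit of (\ref{eq:constraint}); since $\Phi_1^2=6/9\cdot\frac{3}{2}$—wait, here one uses simply $\Phi_1\neq 0$, so $1$ is not an eigenvalue of $\msK$ and condition~4 is vacuous; condition~3 follows because if $N_i\neq 0$ then $f_i(2,\sigma_+,\sigma_-)>0$ by the final clause of Theorem~\ref{thm:dichotomy}. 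Thus $(G,\msH,\msK,\Phi_0,\Phi_1)$ satisfies conditions 1--4 of Definition~\ref{def:ndvacidonbbssh}.

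Finally I would verify that the induced data on the singularity lie in $\mS$, i.e.\ are neither of isotropic nor of LRS Bianchi type $\mrVIIz$. But here the situation is far simpler than in Proposition~\ref{prop:matter dom data on sing}: by hypothesis the underlying Lie group is of Bianchi type $\mrIX$ (we are in $\mB_{\mrIX,\roap}[V]\setminus\mB_{\mrIX,+}[V]$), and Bianchi type $\mrIX$ is neither $\mrI$ nor $\mrVIIz$, so the exceptional cases excluded from $\mS$ simply do not arise. Hence the data on the singularity belong to $\mS_{\mrIX}$, and by Definition~\ref{def:ind data on sing} the development induces data on the singularity. I expect the main obstacle to be purely bookkeeping: making sure that the hypotheses of the second part of Theorem~\ref{thm:dichotomy} are literally those of Lemma~\ref{lemma:Bianchi IX remainder} (so that no additional positivity of $X$ is needed), and checking that the convergence argument for $\theta^{\ell_i}a_i$ goes through with $\sigma_\pm=0$, $\Phi_1^2=6$ — in which case $p_1=p_2=p_3=1/3$ and $\msK=\roId/3$, so one should also note (harmlessly) that $\msH$ then has constant curvature is \emph{not} claimed and not needed, since Bianchi type $\mrIX$ membership already places the data in $\mS$. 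No genuinely new analytic difficulty appears; the proposition is a corollary of the already-established Theorem~\ref{thm:dichotomy} and the construction in Proposition~\ref{prop:matter dom data on sing}.
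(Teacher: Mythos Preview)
Your proposal is correct and follows essentially the same route as the paper: the paper's proof simply says that, due to the conclusions of Theorem~\ref{thm:dichotomy}, the statement follows by an argument essentially identical to the proof of Proposition~\ref{prop:matter dom data on sing}. You have unpacked exactly this, invoking the second part of Theorem~\ref{thm:dichotomy} to obtain the estimates (\ref{seq:SpmphiprNilimmatterdom}) with $\sigma_\pm=0$, $\Phi_1^2=6$, and then rerunning the construction of $\msH$, $\msK$, $\Phi_0$, $\Phi_1$ from the proof of Proposition~\ref{prop:matter dom data on sing}; you also correctly observe that the Bianchi $\mrVIIz$ exclusion argument at the end of that proof is unnecessary here since the Lie group is of Bianchi type $\mrIX$.
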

\begin{proof}
  Due to the conclusions of Theorem~\ref{thm:dichotomy}, the statement follows by an argument which is essentially identical to the proof of
  Proposition~\ref{prop:matter dom data on sing}. 
\end{proof}

\section{The vacuum dominated case}

\begin{prop}\label{prop:I and II not iso as}
  Consider an anisotropic development satisfying the conditions of Lemma~\ref{lemma:BianchiAdevelopment}. Assume that $V\in \mfP_{\a_V}^1$, where
  $\a_V\in (0,1)$ in case of Bianchi type I and non-LRS Bianchi type II; and $\a_V\in (0,1/3)$ otherwise. If the development is vacuum dominated,
  see Definition~\ref{def:matter and vacuum dominated}, and of Bianchi type I or II, then it induces data on the singularity.   
\end{prop}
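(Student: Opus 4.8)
The plan is to feed the Wainwright--Hsu asymptotics of Corollary~\ref{cor:vBtIaII} into Definition~\ref{def:ind data on sing}, following essentially the same bookkeeping as in the proof of Proposition~\ref{prop:matter dom data on sing}. The existence of a crushing singularity, with $t_-=0$ and $\theta(t)\to\infty$ as $t\downarrow 0$, is already supplied by Lemma~\ref{lemma:BianchiAdevelopment} (recall we are in the anisotropic, hence non-isotropic-type-I, case), so what remains is to produce the limits $\msK$, $\Phi_0$, $\Phi_1$, $\msH$ and to verify that $\mfI_\infty:=(G,\msH,\msK,\Phi_0,\Phi_1)$ lies in $\mS$.

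For the geometric part I would work with the Fermi--Walker frame $\{e_i'\}$ and the scale factors $a_i$ of the proof of Proposition~\ref{prop:unique max dev}, recalling that $e_i'$ diagonalises the expansion normalised Weingarten map $\mK$ with eigenvalue $\ell_i=\Sigma_i+1/3$ and that, in $\tau$-time with $\tau(t_0)=0$, $da_i/d\tau=3\ell_i a_i$, $a_i(0)=1$. Then
\[
\theta^{\ell_i}a_i=\theta(0)^{\ell_i(\tau)}\exp\big(-\ell_i(\tau)\textstyle{\int}_0^\tau[1+q(s)]\,ds+\textstyle{\int}_0^\tau 3\ell_i(s)\,ds\big),
\]
and since Corollary~\ref{cor:vBtIaII} gives $q-2=O(e^{\a\tau})$ for some $\a>0$ and $\ell_i\to p_i$ exponentially (the quantities $\Sp,\Sm$, and in Bianchi type II also $\No$, converge exponentially), each $\theta^{\ell_i}a_i$ converges to a strictly positive constant $\a_i$. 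This yields a left invariant metric $\msH$ for which (\ref{eq:bAmetriclimit}) holds, and $\mK\to\msK$ with $\msK$ diagonal in $\{e_i'\}$ and eigenvalues $p_i$. For the scalar field, $\theta^{-1}\phi_t=\phi_1/3\to 0=:\Phi_1$ by (\ref{eq:BIphiasv}) or (\ref{eq:BIIphiasv}), $\phi$ converges to some $\Phi_0$, and $\phi+\theta^{-1}\phi_t\ln\theta$ also converges to $\Phi_0$ because $\phi_1=O(e^{6\tau})$ while $\ln\theta$ grows at most linearly.

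Finally I would check the four conditions of Definition~\ref{def:ndvacidonbbssh} (the exclusion of isotropic and LRS Bianchi type $\mrVIIz$ in Definition~\ref{def:sets of id on singularity} being automatic, since $\mfI_\infty$ is of Bianchi type I or II). Conditions~1 and~2 are immediate: $\rotr\msK=1$ and symmetry follow from $\mK\to\msK$ and the simultaneous diagonalisability of $\msK,\msH$; $\rotr\msK^2+\Phi_1^2=1$ follows by passing to the limit in the constraint (\ref{eq:constraint}) using $\Psi\to 0$ and $\No^2\to 0$; and $\rodiv_\msH\msK=0$ because the commutator matrix of $\{e_i'\}$ is diagonal and the $e_i'$ are eigenvectors of $\msK$ (cf.\ \cite[Lemma~19.13, p.~210]{RinCauchy}). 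For condition~3 (relevant only when all $p_i<1$ and some $p_A\le 0$): in Bianchi type I every subspace is a subalgebra; in Bianchi type II, after relabelling so that $\No\neq 0$, the only $2$-plane that is not a subalgebra is $\mrspan\{e_2',e_3'\}$, perpendicular to $e_1'$, and its eigenvalue is $p_1=(1-2\sigma_+)/3>0$ by the bound $\sigma_+<1/2$ of Corollary~\ref{cor:vBtIaII}, so condition~3 holds. If instead $1$ is an eigenvalue of $\msK$ — excluded for non-LRS solutions, since then the limit point is non-special, and for LRS Bianchi type II forcing the limit point $(-1,0)$ with $\msK e_1'=e_1'$ — then condition~4 holds: in type I the rotations $\Psi_t$ of (\ref{eq:Psit definition}) are trivially Lie algebra isomorphisms, and in LRS type II they fix $e_1'$ and preserve the sole nonvanishing bracket $[e_2',e_3']$ (a nonzero multiple of $e_1'$), hence are isomorphisms. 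This gives $\mfI_\infty\in\mS$, so the development induces $\mfI_\infty$ on the singularity. The only place where I expect any real subtlety is the verification of condition~3 in Bianchi type II, which relies on the strict inequality $\sigma_+<1/2$ (equivalently $f_1(2,\sigma_+,\sigma_-)>0$, as in Theorem~\ref{thm:dichotomy}) rather than just on the soft limiting relations.
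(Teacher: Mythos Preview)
Your proposal is correct and follows essentially the same route as the paper: invoke Corollary~\ref{cor:vBtIaII}, repeat the bookkeeping of Proposition~\ref{prop:matter dom data on sing}, and then separately handle the possibility that $1$ is an eigenvalue of $\msK$ (which only arises at a special limit point, hence only in the LRS case). The one place where you are slightly terser than the paper is in verifying condition~4 of Definition~\ref{def:ndvacidonbbssh}: the definition requires an $\msH$-orthonormal basis, whereas your $\{e_i'\}$ is $\bge(t_0)$-orthonormal, so strictly speaking one should pass to $e_i:=\a_i^{-1}e_i'$ and note that $\a_2=\a_3$ (since $\ell_2=\ell_3$ in the LRS case), which is exactly what the paper does; the remainder of your argument then goes through unchanged.
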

\begin{proof}
  That the development is vacuum dominated means that (\ref{eq:Omegaphiprstrongdec}) holds. This means that the conclusions of
  Corollary~\ref{cor:vBtIaII} hold, so that the desired conclusion follows by an argument which is essentially identical to the proof of
  Proposition~\ref{prop:matter dom data on sing}, with one exception: in this case, $1$ can be an eigenvalue of $\msK$. On the other hand, if
  $1$ is an eigenvalue of $\msK$, the limit point on the Kasner circle is special. Without loss of generality, we can assume
  that $(\Sp,\Sm)(\tau)\rightarrow (-1,0)$ as $\tau\rightarrow -\infty$. Due to Proposition~\ref{prop:limitcharsp}, we conclude that $\Sm=0$ and
  that $\Nt=\Nth$ for the entire solution, so that $\Nt=\Nth=0$. This means that $\ell_2=\ell_3$, using the notation of the proof of
  Proposition~\ref{prop:matter dom data on sing}, so that $a_2=a_3$ and $\msH(e_2',e_2')=\msH(e_3',e_3')$. Rescaling the frame $\{e_i'\}$
  appropriately yields a frame $\{e_i\}$ which is orthonormal with respect to $\msH$; such that $\msK e_1=e_1$; such that $\msK e_A=0$ for
  $A\in\{2,3\}$; and such that the commutator matrix associated with $\{e_i\}$ is $\rodiag(n,0,0)$. In other words, the frame $\{e_i\}$ is
  of the form required in condition 4 of Definition~\ref{def:ndvacidonbbssh}. 
\end{proof}

\begin{prop}\label{prop:VIz and VIIz not iso as}
  Consider an anisotropic development satisfying the conditions of Lemma~\ref{lemma:BianchiAdevelopment}. Assume that $V\in \mfP_{\a_V}^1$, where
  $\a_V\in (0,1/3)$. If the development is vacuum dominated, see Definition~\ref{def:matter and vacuum dominated}, and of Bianchi type VI${}_0$ or 
  VII${}_0$, then it induces data on the singularity.  
\end{prop}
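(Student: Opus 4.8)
The plan is to follow the blueprint of the proof of Proposition~\ref{prop:matter dom data on sing}, drawing the asymptotics from Corollary~\ref{cor:vBtVIzaVIIz} instead of from Theorem~\ref{thm:dichotomy}. Since the development is vacuum dominated, (\ref{eq:Omegaphiprstrongdec}) holds; and since the Lie group is of Bianchi type $\mrVIz$ or $\mrVIIz$, the diagonalising frame $\{e_i'\}$ produced in the proof of Proposition~\ref{prop:unique max dev} can be labelled so that the associated commutator matrix is $\rodiag(0,\bn_2,\bn_3)$, i.e.\ $\No=0$. Hence Corollary~\ref{cor:vBtVIzaVIIz} applies: there is a point $(\sigma_+,\sigma_-)$ on the Kasner circle with $\sigma_+>1/2$ — so, in particular, $(\sigma_+,\sigma_-)$ is non-special — and $q-2$, $\phi'$, $\Sp-\sigma_+$, $\Sm-\sigma_-$ and the $N_i$ all decay exponentially, while $\phi$ converges to some $\Phi_0$ and $e^{3\tau}\theta$ to some $\theta_\infty>0$. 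The development has a crushing singularity since the conditions of Lemma~\ref{lemma:BianchiAdevelopment} hold with $\mfT\ne\mrIX$ and the solution is anisotropic.

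Next I would reproduce the construction of data on the singularity from the first half of the proof of Proposition~\ref{prop:matter dom data on sing}. Writing $\ell_i:=\Sigma_i+1/3$ for the eigenvalues of $\mK$ with respect to $\{e_i'\}$ and using (\ref{eq:dai dtau}), the exponential decay of $q-2$ and of $\ell_i$ to its limit $p_i$, together with the linear growth of $\ln\theta$, shows that $\theta^{\ell_i}a_i$ converges to a strictly positive limit $\a_i$. This yields a left invariant metric $\msH$ for which (\ref{eq:bAmetriclimit}) holds, $\mK\to\msK:=\rodiag(p_1,p_2,p_3)$ with $\sum p_i=1$, $\theta^{-1}\phi_t\to 0=:\Phi_1$ (vacuum dominance), and $\phi+\theta^{-1}\phi_t\ln\theta\to\Phi_0$. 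One then checks the conditions of Definition~\ref{def:ndvacidonbbssh}: $\tr\msK=1$ and symmetry of $\msK$ with respect to $\msH$ are immediate from diagonality in $\{e_i'\}$; $\rodiv_{\msH}\msK=0$ holds because the commutator matrix of $\{e_i'\}$ is diagonal and the $e_i'$ are eigenvectors of $\msK$; $\tr\msK^2+\Phi_1^2=1$ follows by taking the limit of (\ref{eq:constraint}), recalling that $\Psi$ and the $N_i$ tend to zero; for condition~3 note that $p_1=\tfrac13-\tfrac23\sigma_+<0$, while $\sigma_+>1/2$ forces $(\sigma_+,\sigma_-)$ to be non-special, so all $p_i<1$, and the subspace $\mfh$ perpendicular to the $p_1$-eigenspace is $\mrspan\{e_2',e_3'\}$, which is a subalgebra because $[e_2',e_3']=\bn_1e_1'=0$; and condition~4 is vacuous since $1$ is not an eigenvalue of $\msK$.

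It remains to verify that the data on the singularity lie in $\mS$, i.e.\ are neither of isotropic nor of LRS Bianchi type $\mrVIIz$; see Definition~\ref{def:sets of id on singularity}. Since $\a_i>0$ and the commutator matrix of $\msH$ in the frame $\{\a_i^{-1}e_i'\}$ is $\rodiag(0,\bn_2,\bn_3)$ with $\bn_2,\bn_3$ of the same sign, the data on the singularity are of the same Bianchi type as the original data, so for Bianchi type $\mrVIz$ there is nothing to check. In the Bianchi type $\mrVIIz$ case: if the data were isotropic then $\msK=\roId/3$, forcing $\sigma_+=0$, contradicting $\sigma_+>1/2$; and if they were LRS then, in the $\mrVIIz$-diagonalising frame, $p_2=p_3$, hence $\Sm\to 0$, and the limit of (\ref{eq:constraint}) gives $\sigma_+^2=1$, so $\sigma_+=1$. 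Then, exactly as in the $\Nt/\Nth$ computation in the proof of Proposition~\ref{prop:matter dom data on sing}, the exponential convergence $\ell_2-\ell_3\to 0$ together with the sublinear growth of $\ln\theta$ gives $\Nt/\Nth\to 1$; combined with $\phi'\to 0$, this places us in the hypotheses of Proposition~\ref{prop:LRS BVIIz as char} (with $\Phi_1=0$, so $\sigma_+^2+\Phi_1^2/6=1$ and $\sigma_+>-1$), whose conclusion is that the original solution lies in the invariant set $\{\Sm=0,\ \Nt=\Nth\}$, i.e.\ the original initial data are of isotropic or LRS Bianchi type $\mrVIIz$ — contradicting $\mfI\in\mB[V]$.

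The main obstacle I anticipate is this last paragraph: organising the exclusion of LRS data on the singularity in the $\mrVIIz$ case so that the hypotheses of Proposition~\ref{prop:LRS BVIIz as char} are genuinely met — in particular pinning down $\sigma_+=1$ from $\Sm\to 0$ and the vacuum-dominated constraint, and transferring the exponential-rate information on $\ell_2-\ell_3$ to $\Nt/\Nth$. Everything else is a routine transcription of the Bianchi~I/II treatment of Propositions~\ref{prop:matter dom data on sing} and \ref{prop:I and II not iso as}.
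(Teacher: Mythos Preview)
Your proposal is correct and follows the paper's approach; the paper's own proof simply says the conclusion follows from Corollary~\ref{cor:vBtVIzaVIIz} ``by an argument similar to the above'', noting that $1$ cannot be an eigenvalue of $\msK$, and you have correctly unfolded that argument.

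One clarification on the obstacle you flagged: in the $\Nt/\Nth$ computation from the proof of Proposition~\ref{prop:matter dom data on sing}, the exponential decay of $\ell_2-\ell_3$ and the linear growth of $\ln\theta$ only give $\Nt/\Nth\to\bn_2/\bn_3$, where $\bn_2,\bn_3$ are the commutator components in the $\msH$-orthonormal frame $\{\a_i^{-1}e_i'\}$; you still need $\bn_2=\bn_3$ to land at $1$. This, however, also follows from the LRS-on-the-singularity assumption: the commutator matrix transforms as $\nu=(\det A)^{-1}A^t\bn A$ between orthonormal frames, so its eigenvalue multiset is determined up to a global sign, and in the LRS frame it is $\{0,c,c\}$, forcing $\{0,\bn_2,\bn_3\}=\{0,\pm c,\pm c\}$ and hence $\bn_2=\bn_3$. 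This is precisely what the proof of Proposition~\ref{prop:matter dom data on sing} establishes before running the $\Nt/\Nth$ computation, so your reference to that proof is apt; just note that ``$p_2=p_3$'' alone is not the input that makes the limit equal $1$.
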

\begin{proof}
  Due to Corollary~\ref{cor:vBtVIzaVIIz}, the desired conclusion follow by an argument similar to the above. Note, however, that $1$ cannot be
  an eigenvalue of $\msK$ in this case. 
\end{proof}
Finally, the following conclusions hold concerning vacuum dominated LRS Bianchi type VIII and IX developments.
\begin{prop}\label{prop:LRS VIII and IX}
  Consider a development satisfying the conditions of Lemma~\ref{lemma:BianchiAdevelopment}. Assume that $V\in \mfP_{\a_V}^1$, where
  $\a_V\in (0,1/3)$. If the development is vacuum dominated, see Definition~\ref{def:matter and vacuum dominated}, and of LRS Bianchi type VIII
  or of isotropic or LRS Bianchi type IX, then it induces data on the singularity.  
\end{prop}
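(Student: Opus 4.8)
The plan is to follow the template established by Proposition~\ref{prop:matter dom data on sing} and Proposition~\ref{prop:I and II not iso as}, using the detailed asymptotics already derived in Corollaries~\ref{cor:LRS Bianchi type VIII} and \ref{cor:LRS Bianchi type IX}. Since the development is vacuum dominated, (\ref{eq:Omegaphiprstrongdec}) holds, so the relevant corollary (Corollary~\ref{cor:LRS Bianchi type VIII} in the LRS Bianchi type VIII case, Corollary~\ref{cor:LRS Bianchi type IX} in the isotropic or LRS Bianchi type IX case) applies and gives exponential convergence of $(\Sp,\Sm)$ to $(-1,0)$, of $q$ to $2$, of $\Omega$ and $\Psi$ to zero, of $\ln|N_i|-f_i(2,-1,0)\tau-m_i$ to zero, and of $\phi_1,\phi$ to their limits. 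I would first fix the notation of the proof of Proposition~\ref{prop:unique max dev}: an orthonormal frame $\{e_i'\}$ diagonalising the Weingarten and commutator matrices, with $a_i$ defined by (\ref{eq:for the ai}), the eigenvalues $\ell_i=\Sigma_i+1/3$, and the change to $\tau$-time via (\ref{eq:dtdtau}).

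The core computation, exactly as in the proof of Proposition~\ref{prop:matter dom data on sing}, is to show $\theta^{\ell_i}a_i$ converges to a strictly positive limit $\a_i$. Using (\ref{eq:dai dtau}) and (\ref{eq:thetaprime}) one writes $\theta^{\ell_i}a_i=\theta(0)^{\ell_i(\tau)}\exp\!\big(-\ell_i(\tau)\int_0^\tau[1+q(s)]ds+\int_0^\tau 3\ell_i(s)ds\big)$, and since $q-2$ and $\ell_i-p_i$ decay exponentially (here $p_i$ is the $i$-th eigenvalue of the limit $\msK$), the exponent converges; this yields a left invariant Riemannian metric $\msH$ with (\ref{eq:bAmetriclimit}), and convergence of $\mK$ to $\msK$ with $\ell_i\to p_i$. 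Convergence of $\theta^{-1}\phi_t$ to $\Phi_1$ and of $\phi+\theta^{-1}\phi_t\ln\theta$ to $\Phi_0$ follows from the $\phi$-estimates in the corollaries (note $\Phi_1=0$ here, since $\phi_1\to 0$). Then $\tr\msK=1$, symmetry of $\msK$ with respect to $\msH$, and $\rodiv_{\msH}\msK=0$ follow as before from diagonality, and $\tr\msK^2+\Phi_1^2=1$ by taking the limit of (\ref{eq:constraint}). Checking condition 3 of Definition~\ref{def:ndvacidonbbssh} is automatic in the matter-style argument via Theorem~\ref{thm:dichotomy}, but here, because we are vacuum dominated and converge to the special point $(-1,0)$, the eigenvalue $p_1=1$ may occur; this is precisely the situation handled in Proposition~\ref{prop:I and II not iso as}, so I would instead invoke Proposition~\ref{prop:limitcharsp} (and Remark~\ref{remark:limitcharsp}) to conclude $\Sm\equiv 0$ and $\Nt\equiv\Nth$ for the entire solution, which by LRS/isotropy is already part of the hypothesis. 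From $\Sm\equiv 0$ and $\ell_2=\ell_3$ one gets $a_2=a_3$ and $\msH(e_2',e_2')=\msH(e_3',e_3')$; rescaling the frame produces $\{e_i\}$ orthonormal with respect to $\msH$ with $\msK e_1=e_1$, $\msK e_A=0$ for $A\in\{2,3\}$, and commutator matrix of the form required by condition 4 of Definition~\ref{def:ndvacidonbbssh}. The subalgebra condition (condition 3) is vacuous or trivially satisfied since $1$ is an eigenvalue.

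The main obstacle will be the final bookkeeping, namely verifying that the induced data on the singularity are not of isotropic or LRS Bianchi type $\mrVIIz$, so that $\mfI_\infty\in\mS$; but this is not actually an issue here, because the Bianchi type is $\mrVIII$ or $\mrIX$ by hypothesis, not $\mrVIIz$, so the exclusion in Definition~\ref{def:sets of id on singularity} is automatic. Thus the genuine content is just the two steps above: (i) the limit of $\theta^{\ell_i}a_i$ exists and is positive, giving $\msH$; and (ii) the treatment of the eigenvalue $p_1=1$ via Proposition~\ref{prop:limitcharsp}, realising condition 4 by choice of frame. I would write the proof as: ``Since the development is vacuum dominated, (\ref{eq:Omegaphiprstrongdec}) holds; by Corollary~\ref{cor:LRS Bianchi type VIII} (in the LRS Bianchi VIII case) or Corollary~\ref{cor:LRS Bianchi type IX} (in the isotropic or LRS Bianchi IX case), $q-2$, $\Omega$, $\Psi$ and $\Sp+1$ decay exponentially. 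The desired conclusion then follows by an argument essentially identical to the proofs of Propositions~\ref{prop:matter dom data on sing} and \ref{prop:I and II not iso as}, treating the eigenvalue $1$ of $\msK$ exactly as in the proof of Proposition~\ref{prop:I and II not iso as} via Proposition~\ref{prop:limitcharsp}; the only simplification is that the induced data cannot be of Bianchi type $\mrVIIz$, so the corresponding exclusion is automatic.''
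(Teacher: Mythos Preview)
Your proposal is correct and takes essentially the same approach as the paper: invoke Corollaries~\ref{cor:LRS Bianchi type VIII} and \ref{cor:LRS Bianchi type IX} for the exponential asymptotics, then repeat the argument of Propositions~\ref{prop:matter dom data on sing} and \ref{prop:I and II not iso as} to produce $\msH$, $\msK$, $\Phi_0$, $\Phi_1$ and verify condition~4 of Definition~\ref{def:ndvacidonbbssh} for the eigenvalue $1$. The paper's proof is a one-line pointer to exactly these ingredients; your version simply spells out the details, including the helpful observation that the Bianchi type $\mrVIIz$ exclusion is automatic here and that the LRS hypothesis already gives $\Sm\equiv 0$, $\Nt\equiv\Nth$ so that the appeal to Proposition~\ref{prop:limitcharsp} is not strictly needed.
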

\begin{proof}
  In this case, the desired conclusion follows from Corollaries~\ref{cor:LRS Bianchi type VIII} and \ref{cor:LRS Bianchi type IX} and arguments
  similar to the above.
\end{proof}

\chapter{Specifying data on the singularity}\label{section:specdatasing}

Our next goal is to specify data on the singularity. In the end, we wish to prove Theorem~\ref{thm:dataonsingtosolution}. However, as a first
step, we here specify data on the singularity for solutions to (\ref{eq:thetaprime}) and (\ref{seq:NoSmprime})--(\ref{eq:constraint}). The corresponding
geometric results
are to be found in Chapter~\ref{chapter:geo res dir of sing}. Note that similar results are derived in the non-degenerate Einstein stiff fluid
setting in \cite[Subsections~2.4--2.5]{RinQCSymm}. However, we here chose different variables, shortening the arguments in \cite{RinQCSymm}. On
the other hand, we here also deal with the case that the eigenvalues of $\msK$ coincide. This causes substantial additional complications in
comparison with \cite{RinQCSymm}. We begin by proving an abstract existence and uniqueness result. 

\section{Abstract setting}\label{section:abstract setting dos}

Fix $1\leq k\in\nn{}$ and $0\geq \tau_0\in\rn{}$ and let $\mX\in C^{\infty}(\rn{k}\times (-\infty,\tau_0],\rn{k})$. Consider 
\begin{equation}\label{eq:x prime equals mX}
  x'(\tau)=\mX[x(\tau),\tau].
\end{equation}
Fix an $\e>0$ and let $X_{\tau_{0}}$ be the space of continuous functions $x:(-\infty,\tau_{0}]\rightarrow\rn{k}$ such that
$\sup_{\tau\leq \tau_{0}}(e^{-\e\tau}|x(\tau)|)\leq 1$. For $x_{a},x_{b}\in X_{\tau_{0}}$ define
\begin{equation}\label{eq:d tau zero def}
  \begin{split}
    d_{\tau_0}(x_{a},x_{b})
    := & \sup_{\tau\leq\tau_{0}}(e^{-\e\tau}|x_a(\tau)-x_b(\tau)|).
  \end{split}
\end{equation}
Note that $(X_{\tau_{0}},d_{\tau_{0}})$ is a complete metric space. Assume that there is a constant $C_\mX\in\rn{}$ such that for every
$\tau_1\leq \tau_0$, 
\begin{equation}\label{eq:mX bd and lip bd}
  |\mX[x_{a}(\tau),\tau]|\leq C_\mX e^{3\e\tau/2},\ \ \
  |\mX[x_{a}(\tau),\tau]-\mX[x_{b}(\tau),\tau]|\leq C_\mX e^{2\e\tau}d_{\tau_{1}}(x_{a},x_{b})
\end{equation}
for all $\tau\leq\tau_1$ and all $x_a,x_b\in X_{\tau_1}$. 
\begin{lemma}\label{lemma:abs exist and unique}
  Given the above assumptions, there is a uniquely determined unbounded interval $I_{\max}\subset (-\infty,\tau_0]$ and a unique solution
  $x\in C^{1}(I_{\max},\rn{k})$ to (\ref{eq:x prime equals mX}) such that $e^{-\e\tau}|x(\tau)|\rightarrow 0$ as $\tau\rightarrow-\infty$ and
  such that $I_{\max}$ is the maximal existence interval of the solution $x$ to (\ref{eq:x prime equals mX}). Moreover, this unique solution
  is smooth.
\end{lemma}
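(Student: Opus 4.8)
The plan is to prove existence and uniqueness by a fixed-point argument on the complete metric space $(X_{\tau_0},d_{\tau_0})$, then bootstrap to maximality and smoothness. First I would reformulate the problem: a continuous $x:(-\infty,\tau_0]\rightarrow\rn{k}$ with $e^{-\e\tau}|x(\tau)|\rightarrow 0$ solving (\ref{eq:x prime equals mX}) on $(-\infty,\tau_0]$ is, by the first bound in (\ref{eq:mX bd and lip bd}) (which makes $\mX[x(s),s]$ integrable near $-\infty$), equivalent to a solution of the integral equation
\[
x(\tau)=\int_{-\infty}^{\tau}\mX[x(s),s]\,ds.
\]
So I would define the map $\Phi$ by $(\Phi x)(\tau):=\int_{-\infty}^{\tau}\mX[x(s),s]\,ds$ and show that, for $\tau_1$ sufficiently negative, $\Phi$ maps $X_{\tau_1}$ into itself and is a contraction there. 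For the self-map property: using the first estimate in (\ref{eq:mX bd and lip bd}), $|(\Phi x)(\tau)|\leq C_\mX\int_{-\infty}^{\tau}e^{3\e s/2}\,ds=\tfrac{2C_\mX}{3\e}e^{3\e\tau/2}$, so $e^{-\e\tau}|(\Phi x)(\tau)|\leq \tfrac{2C_\mX}{3\e}e^{\e\tau/2}\leq 1$ for $\tau\leq\tau_1$ once $e^{\e\tau_1/2}\leq 3\e/(2C_\mX)$. For the contraction property: by the second estimate in (\ref{eq:mX bd and lip bd}),
\[
|(\Phi x_a)(\tau)-(\Phi x_b)(\tau)|\leq C_\mX d_{\tau_1}(x_a,x_b)\int_{-\infty}^{\tau}e^{2\e s}\,ds=\tfrac{C_\mX}{2\e}e^{2\e\tau}d_{\tau_1}(x_a,x_b),
\]
hence $e^{-\e\tau}|(\Phi x_a)(\tau)-(\Phi x_b)(\tau)|\leq \tfrac{C_\mX}{2\e}e^{\e\tau}d_{\tau_1}(x_a,x_b)$, and taking the supremum over $\tau\leq\tau_1$ gives $d_{\tau_1}(\Phi x_a,\Phi x_b)\leq \tfrac{C_\mX}{2\e}e^{\e\tau_1}d_{\tau_1}(x_a,x_b)$, a contraction once $\tau_1$ is chosen negative enough that $\tfrac{C_\mX}{2\e}e^{\e\tau_1}<1$. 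The Banach fixed point theorem then produces a unique $x\in X_{\tau_1}$ with $\Phi x=x$; differentiating the integral equation shows $x\in C^1((-\infty,\tau_1],\rn{k})$ and solves (\ref{eq:x prime equals mX}), and the self-map bound gives $e^{-\e\tau}|x(\tau)|\leq\tfrac{2C_\mX}{3\e}e^{\e\tau/2}\rightarrow 0$.

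Next I would extend this local solution forward in $\tau$ by the standard ODE existence theorem (here $\mX$ is smooth in both variables, so solutions are smooth and extend to a maximal interval of the form $(-\infty,\tau_*)$ or $(-\infty,\tau_*]$ depending on whether $\tau_*\leq\tau_0$ or $\tau_*=\tau_0$; either way the extended interval is unbounded below). Call the resulting maximal interval $I_{\max}$ and the extended solution $x$. Smoothness follows from smoothness of $\mX$ and a standard bootstrap on (\ref{eq:x prime equals mX}): $x\in C^1\Rightarrow \mX[x(\cdot),\cdot]\in C^1\Rightarrow x\in C^2$, and so on.

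For uniqueness I would argue as follows. Suppose $\tilde x\in C^1(\tilde I,\rn{k})$ is another solution to (\ref{eq:x prime equals mX}) on an unbounded-below interval $\tilde I\subset(-\infty,\tau_0]$ with $e^{-\e\tau}|\tilde x(\tau)|\rightarrow 0$ as $\tau\rightarrow-\infty$. Then for $\tau$ negative enough, $e^{-\e\tau}|\tilde x(\tau)|\leq 1$, so the restriction of $\tilde x$ to $(-\infty,\tau_2]$ lies in $X_{\tau_2}$ for some $\tau_2$; since $\tilde x$ satisfies the integral equation on $(-\infty,\tau_2]$ (again using the first bound in (\ref{eq:mX bd and lip bd}) to justify the convergence of the integral — note this bound is an assumed a priori bound on $\mX$ along any such candidate, which is exactly how (\ref{eq:mX bd and lip bd}) is phrased), $\tilde x|_{(-\infty,\tau_2]}$ is a fixed point of $\Phi$ in $X_{\tau_2}$, hence equals $x$ there by the uniqueness part of the contraction argument (shrinking $\tau_2$ if necessary so that $\Phi$ is a contraction on $X_{\tau_2}$). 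Then $\tilde x$ and $x$ agree on $(-\infty,\tau_2]$ and both solve the ODE, so by forward uniqueness for (\ref{eq:x prime equals mX}) they agree on the common part of their domains; maximality of $I_{\max}$ then gives $\tilde I\subset I_{\max}$ and $\tilde x=x|_{\tilde I}$. This establishes both the uniqueness of $I_{\max}$ and of the solution.

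The main obstacle I anticipate is bookkeeping around the precise meaning of (\ref{eq:mX bd and lip bd}): those estimates are hypothesised to hold for $x_a,x_b\in X_{\tau_1}$, i.e. along functions of controlled exponential growth, so I must be careful in the uniqueness argument to first localise any candidate solution into some $X_{\tau_2}$ before invoking the estimates, and to check that the exponents ($3\e/2$ for the bound, $2\e$ for the Lipschitz bound, versus the weight $e^{\e\tau}$ defining $d_{\tau_0}$) line up so that both the self-map inequality and the contraction factor carry a positive power of $e^{\e\tau_1}$ that can be made small. Everything else is routine: convergence of the improper integral near $-\infty$, equivalence of the integral and differential formulations, and the smoothness bootstrap.
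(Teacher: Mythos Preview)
Your proposal is correct and follows essentially the same approach as the paper: define the integral operator $\Phi$, use the two bounds in (\ref{eq:mX bd and lip bd}) to show that for $\tau_1$ sufficiently negative $\Phi$ is a self-map and a contraction on $X_{\tau_1}$, apply the Banach fixed point theorem, extend forward by standard ODE theory, and for uniqueness localise any candidate solution into some $X_{\tau_2}$ before invoking the fixed-point uniqueness. Your treatment is in fact slightly more explicit about the constants and the smoothness bootstrap than the paper's version.
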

\begin{proof}
  Let $\tau_1\leq\tau_0$, $x\in X_{\tau_1}$ and define $\Phi[x]$ by
  \[
  \Phi[x](\tau):=\textstyle{\int}_{-\infty}^\tau \mX[x(s),s]ds
  \]
  for $\tau\leq\tau_1$. Clearly, $\Phi[x]:(-\infty,\tau_1]\rightarrow\rn{k}$ is well defined and continuous; see (\ref{eq:mX bd and lip bd}).
  Moreover, 
  \begin{equation}\label{eq:Phi x decay}
    |\Phi[x](\tau)|\leq Ce^{3\e\tau/2}
  \end{equation}
  for all $\tau\leq\tau_1$, where $C$ only depends on $C_\mX$ and $\e$. For $\tau_1\leq \tau_0$ close enough to $-\infty$, the bound depending
  only on $\e$ and $C_\mX$, it is thus clear that $\Phi[x]\in X_{\tau_1}$ for all $x\in X_{\tau_1}$. Next, due to the second inequality in
  (\ref{eq:mX bd and lip bd}),
  \[
  d_{\tau_1}(\Phi[x_a],\Phi[x_b])\leq\frac{1}{2}d_{\tau_1}(x_a,x_b)
  \]
  for all $x_a,x_b\in X_{\tau_1}$, assuming $\tau_1\leq \tau_0$ to be close enough to $-\infty$, the bound depending only on $\e$ and $C_\mX$.
  By the Banach fixed point theorem, we conclude that $\Phi$ has a unique fixed point in $X_{\tau_1}$ for $\tau_1\leq \tau_0$ close enough to
  $-\infty$. By the definition of $\Phi$, the properties of $\mX$ and an iterative argument, it is clear that this fixed point, say $y$, is
  smooth and solves (\ref{eq:x prime equals mX}). Let $x$ be the maximal solution to (\ref{eq:x prime equals mX}) with initial data
  $y(\tau_2)$ at $\tau_2\leq\tau_1$ and let $I_{\max}$ be the corresponding existence interval. Then $x(\tau)=y(\tau)$ for $\tau\leq\tau_1$.
  Moreover, since $y$ is a fixed point of $\Phi$, (\ref{eq:Phi x decay}) implies that $e^{-\e\tau}|x(\tau)|\rightarrow 0$ as
  $\tau\rightarrow-\infty$. 

  Assume now $x_i\in C^{1}(I_{\max,i},\rn{k})$, $i=1,2$, to be two solutions satisfying the conditions of the lemma. Then, for
  $\tau_1\in I_{\max,1}\cap I_{\max,2}$ close enough to $-\infty$, it is clear that $x_i\in X_{\tau_1}$, $i=1,2$. Since $x_i$, $i=1,2$, both satisfy
  (\ref{eq:x prime equals mX}) and converge to $0$ as $\tau\rightarrow -\infty$, it follows from the first inequality in (\ref{eq:mX bd and lip bd})
  that $\Phi[x_i]=x_i$, if we consider the $x_i$ to be elements of $X_{\tau_1}$. However, for $\tau_1$ close enough to $-\infty$, we know that
  fixed points of $\Phi$ are unique. This means that $x_1(\tau)=x_2(\tau)$ for $\tau\leq\tau_1$. Since solutions to (\ref{eq:x prime equals mX}) are
  uniquely determined by initial data, it is clear that $x_1=x_2$ and $I_{\max,1}=I_{\max,2}$. The lemma follows. 
\end{proof}

\section{Data on the singularity and adapted variables}\label{ssection:dataonsingexpnormvar}
Our next goal is to specify data on the singularity for solutions to (\ref{eq:thetaprime}) and (\ref{seq:NoSmprime})--(\ref{eq:constraint}), keeping
Remark~\ref{remark:Sigmapm vs Sigmai} in mind. Before doing so, note that if
$\mu_{k}$ are functions satisfying
\begin{equation}\label{eq:mukder smoothness}
  \mu_{k}'=q+6\Sigma_k,
\end{equation}
\index{$\a$Aa@Notation!Expansion normalised!Variables!muk@$\mu_k$}%
then $N_{k}$ can be written $N_{k}=\e_{k}e^{\mu_{k}}$ (no summation), where $\e_{k}$ equals $-1$, $0$ or $1$, assuming the initial data for the
$\mu_{i}$ are chosen appropriately. We specify the following type of data on the singularity.
\begin{definition}
  Let $\sfV$ be the set of
  \index{$\a$Aa@Notation!Symmetry reduced sets of regular initial data!sfV@$\sfV$}%
  \[
  (\e_1,\e_2,\e_3,m_{1},m_{2},m_{3},\sigma_1,\sigma_2,\sigma_3,\varkappa_\infty,\Phi_{1},\Phi_{0})\in\{-1,0,1\}^3\times\rn{9}
  \]
  such that $\sigma_1+\sigma_2+\sigma_3=0$ and  
  \begin{equation}\label{eq:Phi sp sm as constraint}
    p_1^2+p_2^2+p_3^2+\tfrac{1}{9}\Phi_{1}^{2}=1,
  \end{equation}
  where $p_i=\sigma_i+1/3$, and such that one of the following two conditions hold: (i) If $\e_{i}\neq 0$, then $p_{i}>0$.
  (ii) There are $\{i,j,k\}=\{1,2,3\}$ such that $p_i=1$, $m_j=m_k$, $\e_j=\e_k$ and all the $\e_l$ are non-zero.
\end{definition}
Define, for $\lambda\in S_3$, 
\begin{equation*}
  \begin{split}
    & \eta_\lambda^\pm(\e_1,\e_2,\e_3,m_{1},m_{2},m_{3},\sigma_1,\sigma_2,\sigma_3,\varkappa_\infty,\Phi_{1},\Phi_{0})\\
    : = & (\pm\e_{\lambda(1)},\pm\e_{\lambda(2)},\pm\e_{\lambda(3)},m_{\lambda(1)},m_{\lambda(2)},m_{\lambda(3)},
    \sigma_{\lambda(1)},\sigma_{\lambda(2)},\sigma_{\lambda(3)},\varkappa_\infty,\Phi_{1},\Phi_{0}).
  \end{split}
\end{equation*}
\index{$\a$Aa@Notation!Maps!etalambdapm@$\eta_\lambda^\pm$}%
Note that $\eta_\lambda^\pm$ maps $\sfV$ to $\sfV$. 

\begin{prop}\label{prop:variableexandunique}
  Let $V\in \mfP^2_{\a_V}$ for some $\a_V\in (0,1)$ and
  \begin{equation}\label{eq:Pi def}
    \Pi :=(\e_1,\e_2,\e_3,m_{1},m_{2},m_{3},\sigma_1,\sigma_2,\sigma_3,\varkappa_\infty,\Phi_{1},\Phi_{0})\in\sfV.
  \end{equation}
  Then there is a unique solution to (\ref{eq:thetaprime})--(\ref{seq:phizphioev}), where $q$ is defined by
  (\ref{eq:altqdef}), such that
  $N_{k}\neq 0$ if and only if $\e_{k}\neq 0$, such that $N_{k}$, if non-zero, has the same sign as $\e_{k}$, and such that
  \begin{equation}\label{eq:desiredlimit}    
    \lim_{\tau\rightarrow-\infty}[\nu_{1}(\tau),\nu_{2}(\tau),\nu_{3}(\tau),s_{1}(\tau),s_{2}(\tau),s_3(\tau),\psi_{0}(\tau),\psi_{1}(\tau),\varkappa(\tau)]=0,
  \end{equation}
  where
  \begin{subequations}\label{seq:nuivarkappapsiidos}
    \begin{align}                        
      \nu_{i}(\tau) := & \mu_{i}(\tau)-6(\Sigma_i+1/3)\tau-m_{i},\label{eq:Niitonui}\\
      s_i(\tau) := & \Sigma_i(\tau)-\sigma_i,\label{eq:spm def}\\
      \varkappa(\tau) := & \ln\theta(\tau)+3\tau-\varkappa_{\infty},\label{eq:kappaidos}\\
      \psi_{0}(\tau) := & \phi_{0}(\tau)-\phi_{1}\tau-\Phi_{0}.\label{eq:phiidos}\\
      \psi_1(\tau) := & \phi_1(\tau)-\Phi_1,\label{eq:psi one def}      
    \end{align}
  \end{subequations}  
  Here, if $\e_{k}\neq 0$, $\mu_{k}$ is chosen so that $N_{k} =\e_{k}e^{\mu_{k}}$ (no summation). In particular, $\mu_{k}$ then satisfies
  (\ref{eq:mukder smoothness}). If $\e_{k}=0$, $\mu_{k}$ is defined to be the solution to (\ref{eq:mukder smoothness}) with the property that
  $\mu_{k}-6p_{k}\tau\rightarrow 0$ as $\tau\rightarrow-\infty$, where $p_k=\sigma_k+1/3$. Finally, the solution also satisfies the Hamiltonian
  constraint (\ref{eq:constraint}) and the relation (\ref{eq:qdef}).

  If $(\No,\Nt,\Nth,\Sigma_1,\Sigma_2,\Sigma_3,\theta,\phi_0,\phi_1)$ is the solution corresponding to $\Pi\in\sfV$ and $\lambda\in S_3$, then the solution
  corresponding to $\eta_\lambda^\pm(\Pi)$ is given by
  $(\pm N_{\lambda(1)},\pm N_{\lambda(2)},\pm N_{\lambda(3)},\Sigma_{\lambda(1)},\Sigma_{\lambda(2)},\Sigma_{\lambda(3)},\theta,\phi_0,\phi_1)$.
\end{prop}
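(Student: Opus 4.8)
The plan is to establish existence and uniqueness via the abstract result of Lemma~\ref{lemma:abs exist and unique}, and then deduce the transformation law under $\eta_\lambda^\pm$ by a direct symmetry argument. First I would reformulate the system (\ref{eq:thetaprime})--(\ref{seq:phizphioev}) in terms of the ``adapted'' variables $(\nu_i,s_i,\varkappa,\psi_0,\psi_1)$ introduced in (\ref{seq:nuivarkappapsiidos}). The point of these variables is that the desired limit (\ref{eq:desiredlimit}) becomes the statement that the adapted variables tend to $0$ as $\tau\to-\infty$; so I would substitute the defining relations, use $N_k=\e_k e^{\mu_k}$ where $\e_k\neq 0$ and the analogous convention for $\e_k=0$, and write down the resulting ODE system $x'=\mX[x,\tau]$ for $x=(\nu_1,\nu_2,\nu_3,s_1,s_2,s_3,\varkappa,\psi_0,\psi_1)$. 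The key structural features to extract are: (a) the terms coming from $N_i$ carry exponential decay $e^{6 p_i\tau}$ with $p_i>0$ whenever $\e_i\neq 0$ (by condition (i) of membership in $\sfV$) — or, in the exceptional case (ii) with $p_i=1$, the products $N_jN_k$ that actually appear still decay because $p_j+p_k=0$ forces one of $m_j=m_k$, $\e_j=\e_k$ to control the relevant combination (this is exactly the structure exploited in Proposition~\ref{prop:limitcharsp} and Corollary~\ref{cor:LRS Bianchi type IX}); (b) $\Omega$ and $V'\circ\phi_0/\theta^2$ decay like $e^{(6-\sqrt6\a_V|\Phi_1|)\tau}$ or faster, using $V\in\mfP^2_{\a_V}$, $\a_V\in(0,1)$, and the fact that $\ln\theta\sim -3\tau$ and $\phi_0$ grows at most linearly along any candidate solution; (c) $q-2$ is a quadratic expression in the $N_i$ plus $-2\Omega$, hence also exponentially small. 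Collecting these, one checks the bounds (\ref{eq:mX bd and lip bd}) hold on $(-\infty,\tau_1]$ for $\tau_1$ sufficiently negative, with some $\e>0$ chosen smaller than the minimal decay rate appearing above and than $6\min\{p_i:\e_i\neq 0\}$.

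Once (\ref{eq:mX bd and lip bd}) is verified, Lemma~\ref{lemma:abs exist and unique} gives a unique smooth solution $x$ on a maximal interval with $e^{-\e\tau}|x(\tau)|\to 0$; unwinding the change of variables produces a solution $(N_i,\Sigma_i,\theta,\phi_0,\phi_1)$ of the original system with $N_k\neq 0$ iff $\e_k\neq 0$ and with the prescribed sign, and with (\ref{eq:desiredlimit}) holding. Conversely, any solution of (\ref{eq:thetaprime})--(\ref{seq:phizphioev}) with these properties gives, after the substitution, a solution of $x'=\mX[x,\tau]$ with $x(\tau)\to 0$ faster than $e^{\e\tau}$, hence equals the one produced by the lemma; this gives uniqueness. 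To verify that this solution also satisfies the Hamiltonian constraint (\ref{eq:constraint}): let $f$ be the constraint defect as defined after (\ref{eq:altqdef}); then $f'=-2(2-q)f$ by (\ref{eq:fdiffeq}), and since $2-q$ is bounded (indeed small) to the past while $f(\tau)\to 0$ as $\tau\to-\infty$ (because all the $N_i$, $\Sigma_\pm$, $\Psi$ converge to their limiting values satisfying (\ref{eq:Phi sp sm as constraint}), which is precisely (\ref{eq:mfE def})), Gr\"onwall forces $f\equiv 0$; the relation (\ref{eq:qdef}) then follows from (\ref{eq:altqdef}) together with the constraint.

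For the transformation law, the argument is essentially formal: the maps $\psi_\sigma^\pm$ (here in the guise of $\eta_\lambda^\pm$ acting on data, and the corresponding permutation/sign flip acting on $(N_i,\Sigma_i)$) take solutions of the system (\ref{seq:SigmaiNiprprel}), (\ref{eq:thetaprime}), (\ref{seq:phizphioev}) to solutions — this is recorded in Remark~\ref{remark:Sigmapm vs Sigmai}, since the equations written in the $\Sigma_i,N_i$ variables are manifestly equivariant under permuting the index $i$ and under flipping the sign of any single $N_i$ (note $\theta,\phi_0,\phi_1,q$ and the Hamiltonian constraint are invariant). So if $(N_i,\Sigma_i,\theta,\phi_0,\phi_1)$ is the solution associated to $\Pi$, then $(\pm N_{\lambda(i)},\Sigma_{\lambda(i)},\theta,\phi_0,\phi_1)$ is again a solution, and by inspection of (\ref{seq:nuivarkappapsiidos}) its adapted variables are the $\lambda$-permutation of the original ones (the $\nu_i$, $s_i$ permute; $\varkappa,\psi_0,\psi_1$ are unchanged; the sign flip does not affect $\mu_i$ since $|N_i|$ is unchanged), hence still tend to $0$, and the associated data in $\sfV$ is exactly $\eta_\lambda^\pm(\Pi)$. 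By the uniqueness half of the proposition, this transformed solution must be \emph{the} solution associated to $\eta_\lambda^\pm(\Pi)$, which is the claim.

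The main obstacle I expect is the verification of the bounds (\ref{eq:mX bd and lip bd}) in the exceptional case (ii), where $p_i=1$ for some $i$ and the corresponding $p_j=p_k=0$. Here the individual factors $N_j,N_k$ do not decay, so one cannot naively bound each $N$-term; one must check that every product of $N$'s actually occurring in the equations for the adapted variables — in the equations for $s_\pm$ via $S_\pm$, and in $q-2$ — involves either $N_i$ (which decays like $e^{6\tau}$) or the combination $N_jN_k$, and then show $N_jN_k$ decays: since $\e_j=\e_k$ and $m_j=m_k$ in case (ii), and $\mu_j+\mu_k$ satisfies $(\mu_j+\mu_k)'=2q+6(\Sigma_j+\Sigma_k)=2q$ while along the solution $q\to 2$ is forced, one gets $N_jN_k\sim e^{\text{(something)}\tau}$ with the right sign of the exponent only after a bootstrap. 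This is the same phenomenon handled in Propositions~\ref{prop:limitcharsp} and \ref{prop:LRS BVIIz as char}, so the technique is available, but it does require care in choosing $\e$ and possibly running the fixed-point argument on the reduced (LRS) invariant subsystem first; I would isolate this as a lemma before invoking Lemma~\ref{lemma:abs exist and unique}. The remaining estimates — on $\Omega$ and $V'\circ\phi_0/\theta^2$ — are routine given $V\in\mfP^2_{\a_V}$ and the a priori linear growth of $\phi_0$, which itself follows from $|\phi_1|\leq\sqrt6$ on the constraint surface.
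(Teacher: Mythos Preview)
Your overall strategy matches the paper's: adapt variables, verify the bounds (\ref{eq:mX bd and lip bd}), invoke Lemma~\ref{lemma:abs exist and unique}, check the constraint via (\ref{eq:fdiffeq}), and deduce the symmetry statement from equivariance plus uniqueness. The paper works with $s_\pm$ rather than $s_i$ (so $x\in\rn{8}$), but this is cosmetic.

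Your case (ii) analysis, however, contains errors before it reaches the right resolution. The structural claim that the $N$-terms occurring in the equations involve only $N_i$ or $N_jN_k$ is false: both $q-2$ (via (\ref{eq:altqdef})) and $S_+$ (via (\ref{eq:Spdef})) contain $(N_j-N_k)^2$, hence $N_j^2$ and $N_k^2$ individually, and these do not decay when $p_j=p_k=0$. Moreover your equation $(\mu_j+\mu_k)'=2q+6(\Sigma_j+\Sigma_k)=2q$ is wrong in the second equality: $\Sigma_j+\Sigma_k=-\Sigma_i\to-2/3$, so $(\mu_j+\mu_k)'\to 0$ and $N_jN_k$ is asymptotically \emph{bounded}, not decaying; no bootstrap changes this. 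The correct resolution is the one you name last, and it is not a preliminary step but the whole argument in this case: impose the LRS reduction $N_j=N_k$, $\Sm=0$ from the outset and run the fixed-point argument on that subsystem. This is precisely what the paper does---it builds the restriction $N_{2,a}=N_{3,a}$, $\Sigma_{-,a}=0$ into the space $X_{\tau_0}$ when $\sigma_+=-1$. On the reduced system $(N_j-N_k)^2\equiv 0$ and $S_-\equiv 0$, so every surviving $N$-term carries a factor of $N_i$ and decays like $e^{6\tau}$; the hypotheses of Lemma~\ref{lemma:abs exist and unique} then follow with $\e=\min\{3,3(1-\a_V)\}$. Uniqueness in the full system is recovered because, by Proposition~\ref{prop:limitcharsp}, any solution with the prescribed asymptotics already lies in the LRS invariant set.
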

\begin{remark}
  In order for the statement of the proposition to be meaningful, we need to prove that if $\e_{k}=0$, then there is a unique solution $\mu_{k}$ to
  (\ref{eq:mukder smoothness}) (where $q$ and $\Sigma_k$ are the functions associated with the solution to (\ref{eq:thetaprime})--(\ref{seq:phizphioev});
  cf. Remark~\ref{remark:Sigmapm vs Sigmai}) such that $\mu_{k}-6p_{k}\tau\rightarrow 0$ as
  $\tau\rightarrow-\infty$. Note, to this end, that if all the $p_i$ satisfy $p_i<1$, then the remaining assumptions (excluding the ones involving
  $\nu_{k}$ for $k$ such that $\e_{k}=0$) imply that the $N_{i}$ converge to zero exponentially. If there is an $i$ such that $p_i=1$, we can
  assume that $p_1=1$, so that Lemma~\ref{prop:limitcharsp} applies and implies that $\Nt=\Nth$ and that $\Sm=0$. This means that $\No$ and
  $\No\Nt$ converge to zero exponentially. Moreover, due to the asymptotics of $\phi_{0}$, $\phi_1$ and $\theta$; the fact that
  $|\Phi_1|\leq\sqrt{6}$, see (\ref{eq:Phi sp sm as constraint}); and the assumptions
  concerning $V$, it is clear that $\Omega$ decays to zero exponentially. Combining these observations with (\ref{eq:altqdef}), it is clear that
  $q-2$ converges to zero exponentially. Combining the above arguments with the equations for $\Sigma_{\pm}$, it follows that the $\Sigma_{\pm}$
  converge exponentially to their limits. Returning to (\ref{eq:mukder smoothness}) with this information in mind, the desired conclusion follows. 
\end{remark}
\begin{remark}\label{remark:exist Cauchy horizon data sing}
  Assume that $p_1=1$. Then if $\e_2\neq \e_3$ there is no corresponding solution. Moreover, if $\e_2=\e_3\neq 0$
  and $m_2\neq m_3$, there is no corresponding solution. 
\end{remark}
\begin{proof}
  In the following arguments, it is convenient to work with $\Sigma_\pm$ instead of $\Sigma_i$; cf. Remark~\ref{remark:Sigmapm vs Sigmai}. We
  therefore introduce $s_\pm=\Sigma_\pm-\sigma_\pm$, where the relation between $\Sigma_\pm$ ($\sigma_\pm$) and $\Sigma_i$ ($\sigma_i$) is given by
  (\ref{eq:SigmaiitoSpm}). With this notation, and notation as in (\ref{seq:nuivarkappapsiidos}),
  \begin{subequations}\label{seq:nuk prime etc}
    \begin{align}
      \nu_k' = & q-2+(2-q)f_k(0,\Sigma_+,\Sigma_-)\tau+3f_k(0,S_+,S_-)\tau,\label{eq:nuieq}\\
      s_{\pm}' = & -(2-q)\Sigma_\pm-3S_{\pm},\label{eq:spmeq}\\
      \varkappa' = & 2-q,\\
      \psi_0' = & [(2-q)\phi_1+9\theta^{-2}V'\circ\phi_0]\tau,\\
      \psi_1' = & -(2-q)\phi_1-9\theta^{-2}V'\circ\phi_0.
    \end{align}
  \end{subequations}
  Introducing $x:=(\nu_{1},\nu_{2},\nu_{3},s_{+},s_{-},\psi_{0},\psi_{1},\varkappa)$, the right hand side of (\ref{seq:nuk prime etc})
  defines a time dependent vector field on $\rn{8}$, say $\mX$, depending on the parameters $\Pi$. With this notation,
  (\ref{seq:nuk prime etc}) reads
  \begin{equation}\label{eq:x prime mX Pi}
    x'(\tau)=\mX[x(\tau),\tau;\Pi];
  \end{equation}
  note that since $N_k=\e_k e^{\mu_k}$ and (\ref{seq:nuivarkappapsiidos}) holds, the right hand side of (\ref{seq:nuk prime etc}) can be expressed
  in terms of $x$, $\tau$ and $\Pi$ (we here tacitly assume $q$ to be defined by (\ref{eq:altqdef})).  Next, let $\tau_{0}\leq 0$; $\eta_{i}:=6$ if
  $\e_{i}=0$; $\eta_{i}:=6p_{i}$ if $\e_{i}\neq 0$, where
  \begin{equation}\label{eq:pisitosigmapm}
    p_i:=f_i(2,\sigma_+,\sigma_-)/6=\sigma_i+1/3;
  \end{equation}and $\eta_{\phi}:=3(1-\a_V)$. If all the $p_i<1$, let
  \begin{equation}\label{eq:epsilondef}
    \e:=\min\{\eta_{1},\eta_{2},\eta_{3},\eta_{\phi}\}.
  \end{equation}
  If there is a $p_i=1$, assume $p_1=1$, so that $\sigma_+=-1$, $\sigma_-=0$ and $\Phi_1=0$. Let, moreover, $\e:=\min\{3,\eta_{\phi}\}$. Note that
  $\e>0$ due to the assumptions. Define $X_{\tau_{0}}$ and $d_{\tau_0}$ as in (\ref{eq:d tau zero def}) and the adjacent text, with $k=8$. Here and
  below, we use the notation
  \begin{equation}\label{eq:xadef}
    x_{a} = (\nu_{1,a},\nu_{2,a},\nu_{3,a},s_{+,a},s_{-,a},\psi_{0,a},\psi_{1,a},\varkappa_{a})
  \end{equation}
  and similarly for $x_b$. Next, we wish to verify that (\ref{eq:mX bd and lip bd}) is satisfied. Assume, to this end, that $x_a,x_b\in X_{\tau_1}$
  for some $\tau_1\leq\tau_0$ and note that
  \begin{subequations}\label{seq:thetaaphizaphioneadef}
    \begin{align}
      N_{i,a}(\tau) = & n_i\exp[6p_i\tau+\nu_{i,a}(\tau)+f_i(0,s_{+,a}(\tau),s_{-,a}(\tau))\tau],\\
      \theta_{a}(\tau) = & \theta_{\infty}\exp[-3\tau+\varkappa_{a}(\tau)],\\
      \phi_{1,a}(\tau) = & \psi_{1,a}(\tau)+\Phi_{1},\\
      \phi_{0,a}(\tau) = & \psi_{0,a}(\tau)+[\psi_{1,a}(\tau)+\Phi_{1}]\tau+\Phi_{0},    
    \end{align}
  \end{subequations}
  where $n_i:=\e_i e^{m_i}$ and $\theta_\infty:=e^{\varkappa_\infty}$. In case $\sigma_+=-1$, we assume $N_{2,a}=N_{3,a}$ and $\Sigma_{-,a}=0$, since this
  condition is necessary in case $(\Sp,\Sm)\rightarrow (-1,0)$; see Proposition~\ref{prop:limitcharsp}.  
  Replacing $\phi_{0}$ and $\theta$ in the second equality in (\ref{eq:PsiOmegadef}) with $\phi_{0,a}$ and $\theta_{a}$ respectively yields
  $\Omega_{a}$. Replacing $N_{i}$ and $\Omega$ in (\ref{eq:altqdef}) with $N_{i,a}$ and $\Omega_{a}$ respectively yields $q_{a}$. Moreover,
  replacing the $N_{i}$ appearing in the formulae for $S_{\pm}$ with $N_{i,a}$ yields $S_{\pm,a}$. Thus
  \[
  |S_{+,a}(\tau)|+|S_{-,a}(\tau)|\leq C(n_{1}^{2}+n_{2}^{2}+n_{3}^{2})e^{2\e\tau}
  \]
  for $\tau\leq\tau_{1}$, where $C$ is a numerical constant. Moreover, there is a constant $C$, depending only on $\e$, such that
  \[
  \Omega_{a}(\tau)+\theta_a(\tau)^{-2}|V'[\phi_{0,a}(\tau)]|\leq C c_{1}\theta_{\infty}^{-2}e^{\sqrt{6}|\Phi_{0}|}e^{2\eta_{\phi}\tau}
  \]
  for $\tau\leq\tau_{1}$, where we used the fact that $|\Phi_{1}|\leq \sqrt{6}$ and $c_1$ is the constant appearing in
  (\ref{eq:V k-derivatives exp estimate}). Introducing
  \begin{equation}\label{eq:K infty def}
    K_{\infty}:=n_{1}^{2}+n_{2}^{2}+n_{3}^{2}+c_{2}\theta_{\infty}^{-2}e^{\sqrt{6}|\Phi_{0}|},
  \end{equation}
  we conclude that there is a constant $C$, depending only on $\e$, such that
  \begin{equation}\label{eq:SpmaqamtwoVprbd}
    |S_{+,a}(\tau)|+|S_{-,a}(\tau)|+|q_{a}(\tau)-2|+\theta_a(\tau)^{-2}|V'[\phi_{0,a}(\tau)]|\leq CK_{\infty}e^{2\e\tau}
  \end{equation}
  for all $\tau\leq \tau_1$. Combining (\ref{eq:SpmaqamtwoVprbd}) with the fact that $|\Sigma_{\pm,a}(\tau)|\leq 2$ and
  $|\phi_{1,a}(\tau)|\leq\sqrt{6}+1$ for $\tau\leq \tau_{1}$, it follows that
  \begin{equation}\label{eq:mX xa Pi est}
    |\mX[x_a(\tau),\tau;\Pi]|\leq CK_{\infty}\ldr{\tau}e^{2\e\tau}
  \end{equation}
  for $\tau\leq \tau_{1}$, where $C$ only depends on $\e$. Thus the first estimate in (\ref{eq:mX bd and lip bd}) holds. 
  
  Next, we prove the second estimate in (\ref{eq:mX bd and lip bd}). Note, to this end, that (assuming one of $i,j$ equals $1$ in case
  $\sigma_+=-1$)
  \begin{equation*}
    \begin{split}
      N_{i,a}N_{j,a}-N_{i,b}N_{j,b}
      = & n_{i}n_{j}e^{2\e_{ij}\tau}\exp[\nu_{i,b}+\nu_{j,b}+f_i(0,s_{+,b},s_{-,b})\tau+f_j(0,s_{+,b},s_{-,b})\tau]\\
       & (\exp[\de\nu_{i}+\de\nu_{j}+f_i(0,\de s_{+},\de s_{-})\tau+f_j(0,\de s_{+},\de s_{-})\tau]-1),
    \end{split}
  \end{equation*}  
  where $\e_{ij}\geq\e$, $\de\nu_i:=\nu_{i,a}-\nu_{i,b}$ and $\de s_\pm:=s_{\pm,a}-s_{\pm,b}$. Due to the definition of $X_{\tau_{1}}$ and the
  fact that $\tau_1\leq 0$, it follows that $|\nu_{i,a}|\leq 1$, $|s_{+,b}(\tau)|\leq e^{\e\tau}$ etc., so that
  \begin{equation*}
    \begin{split}
      |N_{i,a}N_{j,a}-N_{i,b}N_{j,b}| \leq & C|n_{i}n_{j}|\ldr{\tau}e^{2\e_{ij}\tau}(|\de\nu_{i}|+|\de\nu_{j}|+|\de s_+|+|\de s_-|)\\
      \leq & C|n_{i}n_{j}|\ldr{\tau}e^{(2\e_{ij}+\e)\tau}d_{\tau_1}(x_{a},x_{b})
    \end{split}    
  \end{equation*}
  for $\tau\leq \tau_{1}$, where $C$ only depends on $\e$. Next,
  \begin{equation}\label{eq:theta a rt minus two Va diff}
    \left|\theta_a^{-2}V\circ\phi_{0,a}-\theta_b^{-2}V\circ\phi_{0,b}\right|
    \leq \theta_a^{-2}|V\circ\phi_{0,a}-V\circ\phi_{0,b}|+|V\circ\phi_{0,b}|\cdot|\theta_{a}^{-2}-\theta_{b}^{-2}|.
  \end{equation}
  Keeping in mind that
  \begin{equation}\label{eq:V xi minus V zeta}
    V(\xi)-V(\zeta)=\textstyle{\int}_{0}^{1}V'[s\xi+(1-s)\zeta]ds\cdot (\xi-\zeta)
  \end{equation}
  as well as the definitions of $\phi_{0,a}$, $\phi_{0,b}$, $\theta_{a}$ and $\theta_{b}$, the estimates for $V$ etc., it can be verified that
  \begin{equation}\label{eq:Vdiffest}
    \begin{split}
      \left|\theta_{a}^{-2}V\circ\phi_{0,a}-\theta_{b}^{-2}V\circ\phi_{0,b}\right|
      \leq & CK_{\infty}e^{2\e\tau}[|\delta\psi_{0}|+\ldr{\tau}|\delta\psi_{1}|+|\delta\varkappa|]\\
      \leq & CK_{\infty}\ldr{\tau}e^{3\e\tau}d_{\tau_1}(x_{a},x_{b})
    \end{split}
  \end{equation}
  for $\tau\leq\tau_{0}$, where $C$ only depends on $\e$; $\delta\psi_i:=\psi_{i,a}-\psi_{i,b}$; and $\delta\varkappa:=\varkappa_{a}-\varkappa_{b}$. There
  is a similar estimate with $V$ replaced by $V'$. However, this estimate necessitates control of two derivatives of $V$. It is here we use the fact that
  $V\in\mfP_{\a_V}^2$ and the fact that $c_2$ appears in the definition of $K_\infty$, see (\ref{eq:K infty def}), and not $c_1$. By arguments of this
  type, it can be verified that
  \[
  |\mX[x_{a}(\tau),\tau;\Pi]-\mX[x_{b}(\tau),\tau;\Pi]|\leq CK_{\infty}\ldr{\tau}^2e^{3\e\tau}d_{\tau_1}(x_{a},x_{b}).
  \]
  Thus the second estimate in (\ref{eq:mX bd and lip bd}) holds.

  \textbf{Existence and uniqueness.} That there is a unique solution $x\in C^{1}(I_{\max},\rn{k})$ to (\ref{eq:x prime mX Pi}) now follows from
  Lemma~\ref{lemma:abs exist and unique}. However, the uniqueness is in the class of solutions satisfying $e^{-\e\tau}|x(\tau)|\rightarrow 0$
  as $\tau\rightarrow -\infty$. Note that $x$ satisfies (\ref{seq:nuk prime etc}). Define $N_i$, $\phi_0$, $\phi_1$ and $\theta$ in analogy
  with (\ref{seq:thetaaphizaphioneadef}) and $\Sigma_{\pm}$ by $\Sigma_{\pm}:=s_{\pm}+\sigma_{\pm}$. Finally, the $S_{\pm}$ are defined by
  (\ref{seq:SpSmev}) and $q$ is defined by (\ref{eq:altqdef}).  It can then be verified that (\ref{eq:thetaprime})--(\ref{seq:phizphioev})
  are satisfied. Finally, note that, due to
  the fact that we replaced $q$ by (\ref{eq:altqdef}) in the equations, the identity (\ref{eq:fdiffeq}) holds. Since $2-q$ is integrable, the
  fact that the limit of $f$ is zero implies that $f$ vanishes identically. Thus (\ref{eq:constraint}) holds. Combining (\ref{eq:constraint})
  and (\ref{eq:altqdef}) yields (\ref{eq:qdef}). Finally $x(\tau)\rightarrow 0$ as $\tau\rightarrow-\infty$, so that (\ref{eq:desiredlimit})
  holds. In order to prove uniqueness for solutions $x$ such that $x(\tau)\rightarrow 0$ as $\tau\rightarrow-\infty$, we need to prove that
  convergence to zero implies that $e^{-\e\tau}|x(\tau)|\rightarrow 0$ as $\tau\rightarrow -\infty$. Assume that there is a solution to
  (\ref{eq:thetaprime})--(\ref{seq:phizphioev}) with the properties stated in the proposition. Then $x(\tau)\rightarrow 0$
  as $\tau\rightarrow-\infty$ due to (\ref{eq:desiredlimit}), where $x:(-\infty,\tau_{0}]\rightarrow\rn{8}$ is well defined for some choice of
  $\tau_0\in\rn{}$. Moreover, $x$ is smooth. Due to the assumptions, all the $N_{i}$ converge to zero exponentially (except if $\sigma_+=-1$,
  in which case $\No$ and $\No\Nt$ converge to zero exponentially). The same is true of
  $\Omega$ and $\theta^{-2}V'\circ\phi_0$. Due to the definition of $q$, it follows that $q-2$ converges to zero exponentially. Using these
  observations, it can be demonstrated (by an argument similar to the proof of (\ref{eq:mX xa Pi est})) that $e^{-\e\tau}|x(\tau)|\rightarrow 0$
  as $\tau\rightarrow -\infty$. Existence and uniqueness in the general setting follow.

  \textbf{Symmetry.} The final statement of the proposition follows by uniqueness and the symmetries of the equations. 

  In order to prove the statement in Remark~\ref{remark:exist Cauchy horizon data sing}, assume that there is a solution corresponding to data on
  the singularity with $\sigma_+=-1$ and $\sigma_-=0$. Then, due to Proposition~\ref{prop:limitcharsp}, $\Sm=0$ and $\Nt=\Nth$. This is clearly
  incompatible with $\e_2\neq \e_3$. Assume now, in addition, that $\e_2=\e_3\neq 0$. Then $\mu_2=\mu_3$ (since $\Nt=\Nth$). On the other hand,
  $\nu_2=\nu_3$; since $\Sm=0$ and $S_-=0$, it follows that $\nu_2'=\nu_3'$, so that $\nu_2=\nu_3$ since $\nu_2$ and $\nu_3$ both converge to zero
  as $\tau\rightarrow-\infty$. From (\ref{eq:Niitonui}) and the fact that $\Sm=0$, it then follows that $m_2=m_3$. 
\end{proof}

\chapter[Map to singularity]{The map taking regular initial data to data on the singularity}\label{chapter:map to sing}

For many Bianchi class A types, initial data give rise to data on the singularity, see Chapter~\ref{chapter:as to data on sing}. In the present
chapter, we take the first steps in the analysis of the properties of the corresponding map.

\section[Smoothness, map to singularity]{Smoothness of the map from regular initial data to the
  leading order asymptotics}\label{ssection:contregdatatoasdata}

In the end, we are interested in analysing the map taking isometry classes of regular initial data (or isometry classes of developments) to isometry
classes of initial data on the singularity; cf. Lemma~\ref{lemma:iso ri iso ids incl type}. In particular, we wish to prove that the corresponding map is
defined on an open subset and smooth. In practice, this involves analysing the behaviour of solutions to (\ref{eq:thetaprime}) and
(\ref{seq:NoSmprime})--(\ref{eq:constraint}). However, in what follows, it is convenient to use the
variables $\Sigma_i$ as opposed to $\Sigma_\pm$; see Remark~\ref{remark:Sigmapm vs Sigmai}. We begin by introducing some terminology concerning the
background solutions we perturb in the results below. 

\begin{definition}\label{def:admconvsol}
  Consider a solution $(\No,\Nt,\Nth,\Sigma_1,\Sigma_2,\Sigma_3,\theta,\phi_{0},\phi_{1})$ to (\ref{eq:thetaprime})--(\ref{eq:constraint}), cf.
  Remark~\ref{remark:Sigmapm vs Sigmai}, where
  $V\in C^{\infty}(\rn{})$. It is said to be an \textit{admissible convergent solution} if the following conditions are satisfied.
  First, the existence interval includes $(-\infty,T]$ for some $T\leq 0$ and there are constants $\sigma_{i}$, $i=1,2,3$, and $\Phi_{1}$ such that
  \begin{equation}\label{eq:SpSmphionelim}
    \lim_{\tau\rightarrow-\infty}[\Sigma_1(\tau),\Sigma_2(\tau),\Sigma_3(\tau),\phi_{1}(\tau)]=(\sigma_{1},\sigma_{2},\sigma_3,\Phi_{1}).
  \end{equation}
  Next, there is a constant $C_{\theta}$ such that $\ln\theta(\tau)\geq -3\tau-C_{\theta}$ for all $\tau\leq T$. Finally, one of the following conditions
  are satisfied:
  \begin{enumerate}[(i)]
  \item If $N_i\neq 0$, then $\sigma_i+1/3>0$. Moreover, there are constants $C_{\infty}$ and $\a_{\infty}>0$ such that
    $|N_{i}(\tau)|\leq C_{\infty}e^{\a_{\infty}\tau}$ for all $i\in\{1,2,3\}$ and all $\tau\leq T$.
  \item The solution is of LRS Bianchi type VIII or IX. Moreover, $\Phi_1=0$ and there are $\{i,j,k\}=\{1,2,3\}$ such that $\sigma_i=2/3$
    and $\sigma_j=\sigma_k=-1/3$.
  \end{enumerate}    
\end{definition}
\begin{remark}\label{remark:ex of admiss}
  If the conditions of one of the propositions in Chapter~\ref{chapter:as to data on sing} are satisfied, then the corresponding solution to
  (\ref{eq:thetaprime})--(\ref{eq:constraint}) is an admissible convergent solution. We leave the verification of this
  statement to the reader. 
\end{remark}
Before addressing the differentiability of the map from initial data to asymptotic data, it is convenient to change variables. Note, to this end, that
if $\mu_{k}$ are functions satisfying (\ref{eq:mukder smoothness}), then $N_{k}$ can be written $N_{k}=\e_{k}e^{\mu_{k}}$ (no summation), where $\e_{k}$
equals $-1$, $0$ or $1$, assuming the initial data for the $\mu_{k}$ are chosen appropriately. Introduce the variables
\begin{equation}\label{eq:nui def}
  \nu_k := \mu_k-6(\Sigma_k+1/3)\tau
\end{equation}
(the notation used in the present chapter differs from that used in Section~\ref{ssection:dataonsingexpnormvar}). Define, moreover,
\[
  \varkappa:=\ln\theta+3\tau,\ \ \ \psi_0:=\phi_0-\phi_1\tau
\]
(in what follows, we are only interested in asymptotic regimes in which $\theta>0$). Then 
\begin{subequations}\label{seq:nui etc eqs}
  \begin{align}
    \nu_k' = & q-2+6(2-q)\Sigma_k\tau+18S_k\tau,\label{eq:nuk der mts}\\    
    \Sigma_{k}' = & -(2-q)\Sigma_{k}-3S_{k},\\
    \varkappa' = & 2-q,\\
    \psi_0' = & [(2-q)\phi_1+9\theta^{-2}V'\circ\phi_{0}]\tau,\\
    \phi_1' = & -(2-q)\phi_1-9\theta^{-2}V'\circ\phi_{0}
  \end{align}
\end{subequations}
for $k=1,2,3$, where it is understood that $\phi_0=\psi_0+\phi_1\tau$, that $q$ is given by (\ref{eq:altqdef}), that $\theta=e^{-3\tau+\varkappa}$,
that $\Omega$ is defined by (\ref{seq:PsiPhiphiiqdef}) and that the $S_i$ are given by (\ref{eq:Si def}). Moreover, it is understood that
$N_{k}=\e_{k}e^{\mu_{k}}$ (no summation), where the $\mu_k$ are defined in terms of the $\nu_k$ and $\Sigma_k$ via (\ref{eq:nui def}).
For these equations, the constraint (\ref{eq:constraint}) takes the form
\begin{equation}\label{eq:constraint asympt var}
  \begin{split}
    1 = & \frac{3}{2}(\Sigma_1^2+\Sigma_2^2+\Sigma_3^2)+\frac{1}{6}\phi_1^2+3e^{6\tau-2\varkappa}V(\psi_0+\phi_1\tau)\\
    & +\frac{3}{4}[\e_1^2 e^{2\mu_1}+\e_2^2 e^{2\mu_2}+\e_3^2 e^{2\mu_3}-2(\e_1\e_2 e^{\mu_1+\mu_2}+\e_2\e_3 e^{\mu_2+\mu_3}+\e_1\e_3 e^{\mu_1+\mu_3})].
  \end{split}
\end{equation}
Given initial data at $\tau_0$ satisfying (\ref{eq:constraint asympt var}) with $\tau$ replaced by $\tau_0$, the system (\ref{seq:nui etc eqs}) is
equivalent to (\ref{eq:thetaprime})--(\ref{eq:constraint}) in the case of Bianchi types VIII and IX. For the other Bianchi types, there is a
redundancy: if $\e_k=0$, there is no need to keep track of $\nu_k$ since it decouples from the rest of the equations. On the other hand, if $V$ is
non-negative (the case we are most interested in), and the Bianchi type is different from IX, then the existence interval of solutions to
(\ref{eq:thetaprime})--(\ref{eq:constraint}) is $\ro$, see Lemma~\ref{lemma:globalexistenceresceq}. If $\e_k=0$, we can therefore define $\nu_k$ by
(\ref{eq:nuk der mts}) and the condition that $\nu_k(0)=0$. Here we are interested in solutions such
that the right hand sides in (\ref{seq:nui etc eqs}) all converge to zero exponentially. This means that we, with the initial data, can associate
asymptotic data, which are the limits of the $\nu_i$ etc. In order to, later on, obtain conclusions concerning the map from isometry classes of
initial data to the Einstein non-linear scalar field equations to isometry classes of initial data on the singularity, it is convenient to
specify initial data for $N_i$ instead of for $\nu_i$. If the initial datum at $\tau=0$ for $N_i$ is $n_i$, then we define
$\nu_i(0):=\ln |n_i|$ in case $n_i\neq 0$ and $\nu_i(0):=0$ otherwise. In order to be more specific, we introduce the following terminology.
\begin{definition}
  Fix $V\in C^{\infty}(\ro)$. Then $\sfP\subset\rn{9}$ denotes the set of
  \index{$\a$Aa@Notation!Symmetry reduced sets of regular initial data!sfP@$\sfP$}%
  \[
  (n_1,n_2,n_3,\sigma_1,\sigma_2,\sigma_3,\varkappa_0,\bpsi_0,\bpsi_1)\in\rn{9}
  \]
  such that $\sigma_1+\sigma_2+\sigma_3=0$ and 
  \begin{equation*}
    \begin{split}
      1 = & \frac{3}{2}(\sigma_1^2+\sigma_2^2+\sigma_3^2)+\frac{1}{6}\bpsi_1^2+3e^{-2\varkappa_0}V(\bpsi_0)
      +\frac{3}{4}[n_1^2+n_2^2+n_3^2-2(n_1n_2+n_2n_3+n_3n_1)],
    \end{split}
  \end{equation*}
  and such that if $\sigma_i=n_i=0$ for $i\in\{1,2,3\}$, then either $\bpsi_1\neq 0$ or $V'(\bpsi_0)\neq 0$. 
  Next, let $\sfP_{\mrI}$ denote the subset of $\sfP$ such that all the $n_i$ vanish; let $\sfP_{\mrII}$ denote
  the subset of $\sfP$ such that one of the $n_i$ is non-zero and the other two vanish; let $\sfP_{\mrVIz}$ denote the subset of $\sfP$ such that two
  of the $n_i$ are non-zero and have opposite signs, and the remaining $n_i$ vanishes; let $\sfP_{\mrVIIz}$
  \index{$\a$Aa@Notation!Symmetry reduced sets of regular initial data!sfPT@$\sfP_\mfT$}%
  denote the subset of $\sfP$ such that two
  of the $n_i$ are non-zero and have the same signs, and the remaining $n_i$ vanishes; let $\sfP_{\mrVIII}$ denote the subset of $\sfP$ such that all
  the $n_i$ are non-zero, but they do not all have the same sign; and let $\sfP_{\mrIX}$ denote the subset of $\sfP$ such that all the $n_i$ are
  non-zero, and all have the same sign.
\end{definition}
\begin{remark}
  Note that $\sfP$ is a smooth manifold.
\end{remark}
In analogy with (\ref{eq:psi sigma pm}), we introduce
\begin{equation}\label{eq:xi lambda pm}
  \begin{split}
    & \xi_{\lambda}^{\pm}(n_1,n_2,n_3,\sigma_1,\sigma_2,\sigma_3,\varkappa_0,\bpsi_0,\bpsi_1)\\
    := & (\pm n_{\lambda(1)},\pm n_{\lambda(2)},\pm n_{\lambda(3)},\sigma_{\lambda(1)},\sigma_{\lambda(2)},\sigma_{\lambda(3)},\varkappa_0,\bpsi_0,\bpsi_1)
  \end{split}  
\end{equation}
\index{$\a$Aa@Notation!Maps!xilambdapm@$\xi_{\lambda}^{\pm}$}%
for $\lambda\in S_3$. Note that $\xi_{\lambda}^{\pm}$ maps $\sfP$ to $\sfP$ and $\sfP_\mfT$ to $\sfP_\mfT$ for a Bianchi class A type $\mfT$.
However, with this definition, $\xi_\lambda^+(p)=\xi_\lambda^-(p)$ if $p\in\sfP_{\mrI}$. As in the case of $\psi_\lambda^\pm$, we therefore impose
the following restriction on the domains of definition:
\[
  \xi_\lambda^+:\sfP\rightarrow\sfP,\ \ \
  \xi_\lambda^-:\sfP-\sfP_{\mrI}\rightarrow\sfP-\sfP_{\mrI}.
\]
In order to relate $\sfP$ with the set $\sfB$ introduced in Definition~\ref{def:degenerate version of id}, it is convenient to define the map
$\sfF_\sfP:\sfP\rightarrow\sfB$:
\begin{equation}\label{eq:sfF sfP def}
  \begin{split}
    & \sfF_\sfP(n_1,n_2,n_3,\sigma_1,\sigma_2,\sigma_3,\varkappa_0,\bpsi_0,\bpsi_1)\\
    := & (e^{\varkappa_0}n_1,e^{\varkappa_0}n_2,e^{\varkappa_0}n_3,(\sigma_1+1/3)e^{\varkappa_0},(\sigma_2+1/3)e^{\varkappa_0},(\sigma_3+1/3)e^{\varkappa_0},
    \bpsi_0,e^{\varkappa_0}\bpsi_1/3).
  \end{split}
\end{equation}
\index{$\a$Aa@Notation!Maps!sfFP@$\sfF_\sfP$}%
\begin{lemma}\label{lemma:mfB mP rel}
  Let $\mfT$ be a Bianchi class A type. Then $\sfF_\sfP$ defines a diffeomorphism from $\sfP_\mfT$ to $\sfB_\mfT\cap\{\tr K>0\}$.
  Finally, if $\lambda\in S_3$, then
  \begin{equation}\label{eq:sfF sym comm}
    \psi_{\lambda}^{\pm}\circ\sfF_\sfP=\sfF_\sfP\circ\xi_\lambda^{\pm}.
  \end{equation}
\end{lemma}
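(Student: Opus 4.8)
The plan is to realize $\sfF_\sfP$ explicitly as a scaling by the mean curvature $\theta:=e^{\varkappa_0}$ and to reduce the whole lemma to one algebraic identity relating the constraint defining $\sfP$ to the Hamiltonian constraint (\ref{eq:nuvhc}), together with a routine bookkeeping check of the non-degeneracy clauses and an inspection of the symmetry maps.

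First I would check that $\sfF_\sfP$ maps $\sfP_\mfT$ into $\sfB_\mfT\cap\{\tr K>0\}$. Writing $\theta:=e^{\varkappa_0}>0$, the image tuple has $\bn=\rodiag(\theta n_1,\theta n_2,\theta n_3)$, $K=\rodiag(\theta(\sigma_1+1/3),\theta(\sigma_2+1/3),\theta(\sigma_3+1/3))$, $\bphi_0=\bpsi_0$, $\bphi_1=\theta\bpsi_1/3$; these matrices are diagonal by construction, $\rotr K=\theta(\sigma_1+\sigma_2+\sigma_3+1)=\theta>0$ since $\sigma_1+\sigma_2+\sigma_3=0$, and each $\bn_i$ has the same sign as $n_i$ (as $\theta>0$), so the sign/vanishing pattern distinguishing the Bianchi types is preserved. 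The substantive point is the Hamiltonian constraint. A direct computation, using $\rotr K=\theta$ and $\sum_i\sigma_i=0$, gives
\[
-\rotr\bn^2+\tfrac12(\rotr\bn)^2-\rotr K^2+(\rotr K)^2=\theta^2\left[\tfrac23-\sum_i\sigma_i^2-\tfrac12\Big(\sum_i n_i^2-2\sum_{i<j}n_in_j\Big)\right],
\]
while $\bphi_1^2+2V(\bphi_0)=\theta^2\bpsi_1^2/9+2V(\bpsi_0)$; dividing by $\theta^2$ and multiplying by $3/2$, the equation (\ref{eq:nuvhc}) for the image becomes precisely the constraint in the definition of $\sfP$ (this is, of course, just the expansion-normalised Hamiltonian constraint (\ref{eq:constraint}) rescaled). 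Finally $\bn=0\iff n=0$; $K$ is a multiple of the identity $\iff$ all $\sigma_i+1/3$ coincide $\iff$ all $\sigma_i=0$ (again because $\sum_i\sigma_i=0$); and $\bphi_1\neq 0\iff\bpsi_1\neq 0$, $\bphi_0=\bpsi_0$. Hence the exceptional-data clause of Definition~\ref{def:degenerate version of id} for $\sfB$ matches exactly the corresponding clause in the definition of $\sfP$, so $\sfF_\sfP(\sfP_\mfT)\subset\sfB_\mfT\cap\{\tr K>0\}$.

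Next I would construct the inverse. Given $(\bn,K,\bphi_0,\bphi_1)\in\sfB_\mfT$ with $\rotr K=:\theta>0$, set $\varkappa_0:=\ln\theta$, $n_i:=\bn_i/\theta$, $\sigma_i:=k_i/\theta-1/3$, $\bpsi_0:=\bphi_0$, $\bpsi_1:=3\bphi_1/\theta$; then $\sigma_1+\sigma_2+\sigma_3=\rotr K/\theta-1=0$, and reversing the identity above together with the clause comparison of the previous paragraph shows this tuple lies in $\sfP_\mfT$, and $\sfF_\sfP$ sends it back to $(\bn,K,\bphi_0,\bphi_1)$. Both $\sfF_\sfP$ and this inverse are restrictions of maps built from $\exp$, $\ln$ on $(0,\infty)$, division by $\theta>0$, products and linear maps, hence smooth; $\sfB_\mfT$ is a smooth manifold by Remark~\ref{remark:mfBmfT subm}, and the analogous regularity computation for the constraint on $\sfP$ shows $\sfP_\mfT$ is a smooth manifold as well, so $\sfB_\mfT\cap\{\tr K>0\}$ is a smooth (open) submanifold and $\sfF_\sfP$ restricts to a diffeomorphism $\sfP_\mfT\to\sfB_\mfT\cap\{\tr K>0\}$.

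For the commutation relation (\ref{eq:sfF sym comm}) I would compare the two sides componentwise. By (\ref{eq:xi lambda pm}), $\xi_\lambda^\pm$ permutes $(n_1,n_2,n_3)$ and $(\sigma_1,\sigma_2,\sigma_3)$ by $\lambda$ (flipping the signs of the $n_i$ in the $-$ case) and fixes $\varkappa_0,\bpsi_0,\bpsi_1$; by (\ref{eq:psi sigma pm}), $\psi_\lambda^\pm$ permutes $(\bn_1,\bn_2,\bn_3)$ and $(k_1,k_2,k_3)$ by $\lambda$ (flipping the signs of the $\bn_i$ in the $-$ case) and fixes $\bphi_0,\bphi_1$. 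Since $\sfF_\sfP$ multiplies each of $n_i$, $\sigma_i+1/3$ and $\bpsi_1/3$ by the same factor $e^{\varkappa_0}$ — which $\xi_\lambda^\pm$ leaves unchanged — and since permuting the $\sigma_i$ permutes the $k_i=\theta(\sigma_i+1/3)$ accordingly, the two compositions produce the same tuple; the domains also agree, because $\sfF_\sfP$ carries $\sfP-\sfP_\mrI$ onto $\sfB-\sfB_\mrI$, where $\psi_\lambda^-$ and $\xi_\lambda^-$ live. The one point needing a moment's care here is that $\xi_\lambda^-$ flips the $n_i$ but not the $\sigma_i$, and $\psi_\lambda^-$ flips the $\bn_i$ but not the $k_i$, which is exactly what makes the sign flip compatible with the scaling. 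The only real obstacle in the whole argument is bookkeeping: carrying out the displayed Hamiltonian-constraint identity with the correct numerical factors and matching the two exceptional-data clauses precisely.
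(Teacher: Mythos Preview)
Your proof is correct and complete. The paper itself leaves the proof to the reader, and what you have written is precisely the expected verification: matching the expansion-normalised constraint defining $\sfP$ with the Hamiltonian constraint (\ref{eq:nuvhc}) under the scaling by $\theta=e^{\varkappa_0}$, checking the exceptional-data clauses, exhibiting the explicit smooth inverse, and reading off (\ref{eq:sfF sym comm}) componentwise.
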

\begin{proof}
  We leave the proof to the reader.
\end{proof}
In analogy with Definition~\ref{def:Gamma plus etc}, we introduce the following terminology. 
\begin{definition}\label{def:Lambda plus etc}
  Let $\Lambda$ denote the group consisting of the elements $\xi_\lambda^{\pm}$. Let $\Lambda^{\roev}$ ($\Lambda^{\roodd}$)
  \index{$\a$Aa@Notation!Symmetry groups!Lambdaev@$\Lambda^{\roev}$, $\Lambda^{\roodd}$}%
  consist of the set of $\xi_\lambda^\pm$ such that $\lambda$ is even (odd). Let $\Lambda^{\pm}$
  \index{$\a$Aa@Notation!Symmetry groups!Lambdapm@$\Lambda^{\pm}$}%
  be the subset of
  $\Lambda$ consisting of the maps $\xi_\lambda^\pm$ for $\lambda\in S_3$. Finally, let $\Lambda^{+,\roev}:=\Lambda^+\cap\Lambda^{\roev}$
  \index{$\a$Aa@Notation!Symmetry groups!Lambdapmev@$\Lambda^{\pm,\roev}$, $\Lambda^{\pm\roodd}$}%
  etc. 
\end{definition}
In analogy with Definition~\ref{def:BTs GTs}, we introduce the following terminology.
\begin{definition}\label{def:BTs LTs}
  Fix $V\in C^{\infty}(\ro)$. Let $\mfT$ be a Bianchi class A type. Define $\sfP_\mfT^{\iso}$ to be the subset of $\sfP_\mfT$ consisting of points invariant
  under $\Lambda^+$.
  Define $\sfP_{\mrI}^{\roper}=\varnothing$ and define, for $\mfT\neq\mrI$, $\sfP_{\mfT}^{\roper}$ to be the set of $p\in\sfP_{\mfT}$ such that there is a map
  $\xi_\lambda^-\in \Lambda^{-,\roodd}$ with $\xi_\lambda^-(p)=p$. Define
  $\sfP_{\mfT}^{\roLRS}$ to be the set of $p\in\sfP_{\mfT}-\sfP_{\mfT}^{\iso}$ such that there is a map $\xi_\lambda^+\in \Lambda^{+,\roodd}$ with $\xi_\lambda^+(p)=p$.
  Define
  \[
  \sfP_{\mfT}^{\rogen}:=\sfP_{\mfT}-\sfP_{\mfT}^{\iso}-\sfP_{\mfT}^{\roper}-\sfP_{\mfT}^{\roLRS}.
  \]
  Define $\Lambda_\mrI^{\iso}:=\{\roId\}$ and $\Lambda_{\mrIX}^{\iso}:=\{\roId,\xi^{-}_{\roId}\}$. Define $\Lambda_{\mrVIz}^{\roper}:=\Lambda^{\roev}$,
  $\Lambda_{\mrI}^{\roLRS}:=\Lambda^{+,\roev}$ and $\Lambda_{\mfT}^{\roLRS}:=\Lambda^{\roev}$ for $\mfT\notin \{\mrI,\mrVIz\}$. Finally, define
  $\Lambda_{\mrI}^{\rogen}:=\Lambda^+$ and $\Lambda_\mfT^{\rogen}:=\Lambda$ for $\mfT\neq\mrI$. 
\end{definition}
\begin{remark}
  Note that $\sfP_\mfT^{\iso}$ is empty unless $\mfT\in \{\mrI,\mrIX\}$; $\sfP_\mfT^{\roper}$ is empty unless $\mfT=\mrVIz$;
  and $\sfP_{\mrVIz}^{\roLRS}$ is empty.
\end{remark}
In analogy with Lemma~\ref{lemma:sfR mfT mfs}, the following statement holds. 
\begin{lemma}\label{lemma:sfQ mfT mfs}
  Fix $V\in C^{\infty}(\ro)$. Let $\mfT$ be a Bianchi class A type and $\mfs\in\{\iso,\roper,\roLRS,\rogen\}$ be such that $\sfP_{\mfT}^{\mfs}$ and
  $\Lambda_{\mfT}^{\mfs}$ are defined in Definition~\ref{def:BTs LTs} and non-empty. Then $\Lambda_{\mfT}^{\mfs}$ acts freely and properly discontinuously on
  $\sfP_{\mfT}^{\mfs}$ and the corresponding quotient is denoted $\sfN_{\mfT}^{\mfs}$.
  \index{$\a$Aa@Notation!Symmetry reduced sets of regular initial data!sfNTs@$\sfN_{\mfT}^{\mfs}$}%
\end{lemma}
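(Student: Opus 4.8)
The plan is to show that Lemma~\ref{lemma:sfQ mfT mfs} is essentially a transcription of Lemma~\ref{lemma:sfR mfT mfs} via the diffeomorphism $\sfF_\sfP$ constructed in Lemma~\ref{lemma:mfB mP rel}. First I would observe that, by Lemma~\ref{lemma:mfB mP rel}, $\sfF_\sfP$ restricts to a diffeomorphism from $\sfP_\mfT$ onto $\sfB_\mfT\cap\{\tr K>0\}$, and that it intertwines the symmetry maps via $\psi_\lambda^{\pm}\circ\sfF_\sfP=\sfF_\sfP\circ\xi_\lambda^{\pm}$. Consequently, by the very definitions of $\sfP_\mfT^{\mfs}$ and $\Lambda_\mfT^{\mfs}$ (Definition~\ref{def:BTs LTs}) on the one hand and $\sfB_\mfT^{\mfs}$ and $\Gamma_\mfT^{\mfs}$ (Definition~\ref{def:BTs GTs}) on the other, $\sfF_\sfP$ maps $\sfP_\mfT^{\mfs}$ diffeomorphically onto $\sfB_\mfT^{\mfs}\cap\{\tr K>0\}$, and under the obvious identification $\xi_\lambda^{\pm}\leftrightarrow\psi_\lambda^{\pm}$ the group $\Lambda_\mfT^{\mfs}$ corresponds to $\Gamma_\mfT^{\mfs}$.

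Next I would verify that $\sfP_\mfT^{\mfs}$ is a smooth manifold. The set $\sfP$ is a smooth manifold (this is recorded as a remark just after its definition; the argument is the same gradient computation as in Remark~\ref{remark:mfBmfT subm}, applied to the constraint written in the $\sfP$ variables), and each $\sfP_\mfT$ is open in $\sfP$ except for the constraints cutting out the Bianchi type, which are open conditions on the signs of the $n_i$; hence $\sfP_\mfT$ is a smooth manifold. Passing to $\sfP_\mfT^{\mfs}$ amounts to intersecting with the fixed-point set of a subgroup of $\{\xi_\lambda^{\pm}\}$ (for $\iso$, $\roper$), or removing a closed fixed-point set and then intersecting with another (for $\roLRS$), or removing closed fixed-point sets (for $\rogen$); these operations preserve smoothness exactly as in the proof of Lemma~\ref{lemma:sfR mfT mfs}, since the analogues of Lemmas~\ref{lemma:psi sigma plus isotropic}, \ref{lemma:sfB char per symm VIz}, \ref{lemma:sfB char per symm VIIz} transfer through $\sfF_\sfP$ (or can be checked directly in the $\sfP$ variables by the same computation). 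Alternatively, and more cleanly, one can simply note that $\sfF_\sfP$ carries $\sfP_\mfT^{\mfs}$ diffeomorphically onto $\sfB_\mfT^{\mfs}\cap\{\tr K>0\}$, which is open in $\sfB_\mfT^{\mfs}$ and hence smooth by Lemma~\ref{lemma:sfR mfT mfs}.

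Finally I would deduce the group action statement. By Lemma~\ref{lemma:sfR mfT mfs}, $\Gamma_\mfT^{\mfs}$ acts freely and properly discontinuously on $\sfB_\mfT^{\mfs}$; since $\sfB_\mfT^{\mfs}\cap\{\tr K>0\}$ is $\Gamma_\mfT^{\mfs}$-invariant (every $\psi_\lambda^{\pm}$ fixes $\tr K$, as is clear from \eqref{eq:psi sigma pm}), the restricted action is still free and properly discontinuous. Transporting this action through the diffeomorphism $\sfF_\sfP$ and the identification of $\Lambda_\mfT^{\mfs}$ with $\Gamma_\mfT^{\mfs}$ shows $\Lambda_\mfT^{\mfs}$ acts freely and properly discontinuously on $\sfP_\mfT^{\mfs}$, so the quotient $\sfN_\mfT^{\mfs}$ is smooth (and in fact $\sfF_\sfP$ descends to a diffeomorphism $\sfN_\mfT^{\mfs}\to\sfR_\mfT^{\mfs}(\vartheta)$-type statement, though only the smoothness of the quotient is claimed here). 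The only genuinely delicate point—and the place where one must be a little careful rather than merely invoke Lemma~\ref{lemma:mfB mP rel}—is checking that the bookkeeping of which symmetry subgroup goes with which symmetry type ($\Lambda^{\roev}$ versus $\Lambda^{+,\roev}$, the role of $\xi_{\roId}^-$ in Bianchi type IX, and the exclusion of $\sfP_\mrI$ from the domain of the odd maps) is consistent with the corresponding choices in Definition~\ref{def:BTs GTs}; but this is a line-by-line comparison of Definitions~\ref{def:BTs GTs} and \ref{def:BTs LTs}, and \eqref{eq:sfF sym comm} guarantees the two bookkeepings match. I expect this comparison, rather than any analytic estimate, to be the main (and only mild) obstacle.
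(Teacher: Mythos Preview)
Your proposal is correct and follows exactly the paper's approach: the paper's proof is the single observation that $\Lambda_\mfT^\mfs=\sfF_\sfP^{-1}\circ\Gamma_\mfT^\mfs\circ\sfF_{\sfP}$ and $\sfP_\mfT^\mfs=\sfF_\sfP^{-1}[\sfB_\mfT^\mfs\cap\{\tr K>0\}]$, whence the result is immediate from Lemma~\ref{lemma:sfR mfT mfs}. Your additional remarks about $\Gamma_\mfT^\mfs$-invariance of $\{\tr K>0\}$ and the smoothness of $\sfP_\mfT^\mfs$ are correct elaborations of details the paper leaves implicit.
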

\begin{proof}
  Since $\Lambda_\mfT^\mfs=\sfF_\sfP^{-1}\circ\Gamma_\mfT^\mfs\circ\sfF_{\sfP}$ and $\sfP_\mfT^\mfs=\sfF_\sfP^{-1}[\sfB_\mfT^\mfs\cap\{\tr K>0\}]$, the statement is an
  immediate consequence of Lemma~\ref{lemma:sfR mfT mfs}. 
\end{proof}
In some situations it is of interest to fix $\varkappa$; note that under the diffeomorphism between $\sfP_\mfT$ and $\sfB_\mfT$, see
Lemma~\ref{lemma:mfB mP rel}, fixing $\varkappa$
corresponds to fixing the initial mean curvature. We therefore, for a fixed $\varkappa_0\in\ro$, introduce the notation $\sfP_{\mfT}(\varkappa_0)$
\index{$\a$Aa@Notation!Symmetry reduced sets of regular initial data!sfPTvka@$\sfP_{\mfT}(\varkappa_0)$}%
and
$\sfP_{\mfT}^{\mfs}(\varkappa_0)$
\index{$\a$Aa@Notation!Symmetry reduced sets of regular initial data!sfPTsvka@$\sfP_{\mfT}^\mfs(\varkappa_0)$}%
for the subsets of $\sfP_{\mfT}$ and $\sfP_{\mfT}^{\mfs}$ with the initial datum for $\varkappa$ equalling $\varkappa_0$. Since we have eliminated trivial
initial data, these sets are manifolds. 

In the case of Bianchi type IX, we need to introduce terminology analogous to the one introduced in Definitions~\ref{definition:Bap Bp Bpp} and
\ref{def:plus ap pp nd}. 
We use the notation $\sfP_{\mrIX,+}^{\rond}$, $\sfP_{\mrIX,+}^{\mfs,\rond}$, $\sfP_{\mrIX,+}^{\rond}(\varkappa_0)$ and $\sfP_{\mrIX,+}^{\mfs,\rond}(\varkappa_0)$
\index{$\a$Aa@Notation!Symmetry reduced sets of regular initial data!sfPIXpnd@$\sfP_{\mrIX,+}^{\rond}$, $\sfP_{\mrIX,+}^{\mfs,\rond}$,
  $\sfP_{\mrIX,+}^{\rond}(\varkappa_0)$, $\sfP_{\mrIX,+}^{\mfs,\rond}(\varkappa_0)$}%
to denote the subsets of $\sfP_{\mrIX}$, $\sfP_{\mrIX}^{\mfs}$, $\sfP_{\mrIX}(\varkappa_0)$ and $\sfP_{\mrIX}^{\mfs}(\varkappa_0)$ respectively, such that the
corresponding solutions to (\ref{seq:nui etc eqs}) have the following properties: $\varkappa'(\tau)-3<0$ for all $\tau\leq 0$; and there is a
$\tau_0\leq 0$ such that
\[
\mfX[e^{-3\tau_0+\varkappa(\tau_0)},\psi_0(\tau_0)+\phi_1(\tau_0)\tau_0,e^{-3\tau_0+\varkappa(\tau_0)}\phi_1(\tau_0)/3]>0.
\]
Note here that initial data in $\sfP$ correspond to data at $\tau=0$ for (\ref{seq:nui etc eqs}). Next, we use the notation $\sfP_{\mrIX,\roap}^{\rond}$,
$\sfP_{\mrIX,\roap}^{\mfs,\rond}$, $\sfP_{\mrIX,\roap}^{\rond}(\varkappa_0)$ and $\sfP_{\mrIX,\roap}^{\mfs,\rond}(\varkappa_0)$
\index{$\a$Aa@Notation!Symmetry reduced sets of regular initial data!sfPIXapnd@$\sfP_{\mrIX,\roap}^{\rond}$, $\sfP_{\mrIX,\roap}^{\mfs,\rond}$,
  $\sfP_{\mrIX,\roap}^{\rond}(\varkappa_0)$, $\sfP_{\mrIX,\roap}^{\mfs,\rond}(\varkappa_0)$}%
to denote the subsets of $\sfP_{\mrIX}$,
$\sfP_{\mrIX}^{\mfs}$, $\sfP_{\mrIX}(\varkappa_0)$ and $\sfP_{\mrIX}^{\mfs}(\varkappa_0)$ respectively, such that the corresponding solutions to
(\ref{seq:nui etc eqs}) have the following properties: $\varkappa'(\tau)-3<0$ for all $\tau\leq 0$; 
\[
\lim_{\tau\rightarrow-\infty}\big(e^{6\tau-2\varkappa(\tau)}\mfX[e^{-3\tau+\varkappa(\tau)},\psi_0(\tau)+\phi_1(\tau)\tau,e^{-3\tau+\varkappa(\tau)}\phi_1(\tau)/3]\big)_-=0,
\]
where, for $x\in\ro$, $x_-=x$ if $x\leq 0$ and $x_-=0$ if $x>0$; and
\[
\lim_{\tau\rightarrow-\infty}\big[e^{6\tau-2\varkappa(\tau)}V(\psi_0(\tau)+\phi_1(\tau)\tau)\big]=0.
\]
Finally,
\begin{align*}
  \sfP_{\mrIX,\ropp}^{\rond} := & \sfP_{\mrIX,+}^{\rond}\cup \sfP_{\mrIX,\roap}^{\rond},\ \ \
  \sfP_{\mrIX,\ropp}^{\rond}(\varkappa_0):=\sfP_{\mrIX,+}^{\rond}(\varkappa_0)\cup \sfP_{\mrIX,\roap}^{\rond}(\varkappa_0),\\
  \sfP_{\mrIX,\ropp}^{\mfs,\rond} := & \sfP_{\mrIX,+}^{\mfs,\rond}\cup \sfP_{\mrIX,\roap}^{\mfs,\rond},\ \ \
  \sfP_{\mrIX,\ropp}^{\mfs,\rond}(\varkappa_0):=\sfP_{\mrIX,+}^{\mfs,\rond}(\varkappa_0)\cup \sfP_{\mrIX,\roap}^{\mfs,\rond}(\varkappa_0).
\end{align*}
\index{$\a$Aa@Notation!Symmetry reduced sets of regular initial data!sfPIXppnd@$\sfP_{\mrIX,\ropp}^{\rond}$, $\sfP_{\mrIX,\ropp}^{\mfs,\rond}$,
  $\sfP_{\mrIX,\ropp}^{\rond}(\varkappa_0)$, $\sfP_{\mrIX,\ropp}^{\mfs,\rond}(\varkappa_0)$}%
Next, we introduce terminology for the asymptotic information. 
\begin{definition}\label{def:ma mfT mfs}
  Let $V\in C^{\infty}(\ro)$, $\mfT$ be a Bianchi class A type and $\mfs$ be a corresponding symmetry type. Let $\sfA_{\mfT}$
  \index{$\a$Aa@Notation!Symmetry reduced sets of regular initial data!sfAT@$\sfA_{\mfT}$}%
  denote the subset of
  $\sfP_{\mfT}$ whose elements give rise to admissible convergent solutions  in the sense of Definition~\ref{def:admconvsol}, and let
  $\sfA_{\mfT}^{\mfs}:=\sfA_{\mfT}\cap \sfP_{\mfT}^{\mfs}$.
  \index{$\a$Aa@Notation!Symmetry reduced sets of regular initial data!sfATs@$\sfA_{\mfT}^\mfs$}%
  Define $\sfQ_{\mfT}$
  \index{$\a$Aa@Notation!Symmetry reduced sets of regular initial data!sfQT@$\sfQ_{\mfT}$}%
  to be the set of elements
  \[
  (\bn_1,\bn_2,\bn_3,\sigma_1,\sigma_2,\sigma_3,\varkappa_\infty,\Phi_0,\Phi_1)\in \rn{9}
  \]
  with the following properties. First, all the $\bn_i$ vanish if $\mfT=\mrI$; exactly one $\bn_i$ is non-zero if $\mfT=\mrII$; two $\bn_i$ are
  non-zero and have opposite signs and one $\bn_i$ vanishes if $\mfT=\mrVIz$; two $\bn_i$ are non-zero and have the same sign and one $\bn_i$
  vanishes if $\mfT=\mrVIIz$; all three $\bn_i$ are non-zero, but they do not all have the same sign if $\mfT=\mrVIII$; and all three $\bn_i$
  are non-zero and have the same sign if $\mfT=\mrIX$. Second, $\sigma_1+\sigma_2+\sigma_3=0$. Third,
  \begin{equation}\label{eq:as Ham con pi Phione}
    p_1^2+p_2^2+p_3^2+\tfrac{1}{9}\Phi_1^2=1,
  \end{equation}
  where $p_i:=\sigma_i+1/3$. Fourth, if $\{i,j,k\}=\{1,2,3\}$ and $p_i=1$, then $\bn_j=\bn_k$. Fifth, if $p_l<1$ for all $l$ and $\bn_i\neq 0$
  for some $i$, then $p_i>0$. 
\end{definition}
Clearly, $\xi_{\lambda}^{\pm}$ introduced in (\ref{eq:xi lambda pm}) leaves $\sfQ_\mfT$ invariant so that we can define $\sfQ_\mfT^\mfs$
\index{$\a$Aa@Notation!Symmetry reduced sets of regular initial data!sfQTs@$\sfQ_{\mfT}^\mfs$}%
in analogy with Definition~\ref{def:BTs LTs}. Moreover, the following analogy of Lemma~\ref{lemma:sfQ mfT mfs} holds. 
\begin{lemma}\label{lemma:sfU mfT mfs}
  Let $\mfT$ be a Bianchi class A type and $\mfs\in\{\iso,\roper,\roLRS,\rogen\}$ be such that $\sfQ_{\mfT}^{\mfs}$ and $\Lambda_{\mfT}^{\mfs}$ are
  defined in analogy with or as in Definition~\ref{def:BTs LTs}, respectively, and non-empty. Then $\Lambda_{\mfT}^{\mfs}$ acts freely and properly
  discontinuously on $\sfQ_{\mfT}^{\mfs}$ and the corresponding quotient is denoted $\sfW_{\mfT}^{\mfs}$.
  \index{$\a$Aa@Notation!Symmetry reduced sets of regular initial data!sfWTs@$\sfW_{\mfT}^\mfs$}%
\end{lemma}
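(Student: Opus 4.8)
\textbf{Proof proposal for Lemma~\ref{lemma:sfU mfT mfs}.}

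The plan is to reduce the statement to Lemma~\ref{lemma:sing mfT mfs} by exhibiting an explicit diffeomorphism $\sfQ_\mfT\to\sfS_\mfT\cap\{\tr K>0\}$ (or more precisely onto the appropriate subset of $\sfS_\mfT$) which intertwines the action of $\Lambda_\mfT^\mfs$ on $\sfQ_\mfT^\mfs$ with the action of $\Gamma_\mfT^\mfs$ on $\sfS_\mfT^\mfs$, exactly as Lemma~\ref{lemma:mfB mP rel} and Lemma~\ref{lemma:sfQ mfT mfs} did in the regular-data setting. First I would define a map $\sfF_\sfQ:\sfQ_\mfT\to\sfS_\mfT$ by sending $(\bn_1,\bn_2,\bn_3,\sigma_1,\sigma_2,\sigma_3,\varkappa_\infty,\Phi_0,\Phi_1)$ to $(o,P,\Phi_0,\Phi_1)$ with $o=\rodiag(\bn_1,\bn_2,\bn_3)$ and $P=\rodiag(p_1,p_2,p_3)$, where $p_i=\sigma_i+1/3$; the scaling by $e^{\varkappa_\infty}$ present in $\sfF_\sfP$ (see (\ref{eq:sfF sfP def})) is unnecessary here since the data on the singularity are already expansion normalised, so $\varkappa_\infty$ simply passes through. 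One then checks that the defining conditions of $\sfQ_\mfT$ (Definition~\ref{def:ma mfT mfs}) translate term by term into the defining conditions of $\sfS_\mfT$ (Definition~\ref{def:mfS def}): $\sigma_1+\sigma_2+\sigma_3=0$ together with $p_i=\sigma_i+1/3$ gives $\tr P=1$; equation (\ref{eq:as Ham con pi Phione}) gives $\tr P^2+\Phi_1^2=1$; and the fourth and fifth bullets of Definition~\ref{def:ma mfT mfs} are precisely the two conditional requirements ``$p_i=1\Rightarrow o_j=o_k$'' and ``all $p_l<1$ and $o_i\neq 0\Rightarrow p_i>0$'' in Definition~\ref{def:mfS def}; the Bianchi-type conditions on the signs of the $\bn_i$ match those on the $o_i$. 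This shows $\sfF_\sfQ$ is a well-defined bijection $\sfQ_\mfT\to\sfS_\mfT$, and it is manifestly smooth with smooth inverse, hence a diffeomorphism. (If one wishes to be careful about whether $\sfQ_\mfT$ hits all of $\sfS_\mfT$ or only a subset, one notes that the parametrisation of $\sfS_\mfT$ in Definition~\ref{def:mfS def} makes no positivity assumption on a ``mean curvature'', so in fact $\sfF_\sfQ$ is onto $\sfS_\mfT$.)

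Next I would verify the intertwining relation $\psi_\lambda^\pm\circ\sfF_\sfQ=\sfF_\sfQ\circ\xi_\lambda^\pm$ for all $\lambda\in S_3$, in direct analogy with (\ref{eq:sfF sym comm}). This is an immediate computation: $\xi_\lambda^\pm$ permutes the triples $(\bn_i)$ and $(\sigma_i)$ according to $\lambda$ and possibly flips the signs of all three $\bn_i$, and under $\sigma_i\mapsto\sigma_{\lambda(i)}$ one has $p_i\mapsto p_{\lambda(i)}$, which is exactly how $\psi_\lambda^\pm$ acts on $(o,P)$ via (\ref{eq:psi sigma pm}) (with the $\bn_i$ replaced by the $o_i$ and the $k_i$ replaced by the $p_i$); the components $\varkappa_\infty$, $\Phi_0$, $\Phi_1$ are fixed by both maps. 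From this and the fact that the groups $\Lambda_\mfT^\mfs$ and $\Gamma_\mfT^\mfs$ are built from the same combinatorial data (Definitions~\ref{def:Gamma plus etc}, \ref{def:BTs GTs}, \ref{def:Lambda plus etc}, \ref{def:BTs LTs}), it follows that $\Lambda_\mfT^\mfs=\sfF_\sfQ^{-1}\circ\Gamma_\mfT^\mfs\circ\sfF_\sfQ$ and that $\sfF_\sfQ$ restricts to a diffeomorphism $\sfQ_\mfT^\mfs\to\sfS_\mfT^\mfs$ (the subsets $\sfQ_\mfT^\mfs$ being defined from the $\Lambda$-action in the same way $\sfS_\mfT^\mfs$ is defined from the $\Gamma$-action). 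Then the statement of the lemma — that $\Lambda_\mfT^\mfs$ acts freely and properly discontinuously on $\sfQ_\mfT^\mfs$, with smooth quotient $\sfW_\mfT^\mfs$ — is transported verbatim from the corresponding assertion of Lemma~\ref{lemma:sing mfT mfs} for $\Gamma_\mfT^\mfs$ acting on $\sfS_\mfT^\mfs$ via the equivariant diffeomorphism $\sfF_\sfQ$.

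The routine parts are the term-by-term matching of the defining conditions and the equivariance check; these are entirely parallel to the proofs of Lemmas~\ref{lemma:mfB mP rel} and \ref{lemma:sfQ mfT mfs}, and the excerpt indeed leaves those proofs ``to the reader.'' The only point requiring a moment's care — and what I would flag as the main obstacle, modest as it is — is confirming that $\sfQ_\mfT$ maps \emph{onto} $\sfS_\mfT$ rather than onto a proper subset, so that Lemma~\ref{lemma:sing mfT mfs} can be invoked without restriction. This comes down to observing that in the data-on-the-singularity setting the variable $\varkappa_\infty$ (the asymptotic value of $\ln\theta+3\tau$) is an unconstrained real parameter carried along trivially, with no analogue of the positivity constraint ``$\tr K>0$'' that cut $\sfB_\mfT$ down in Lemma~\ref{lemma:mfB mP rel}; hence the diffeomorphism is global. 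Once this is noted, the proof is a one-line transport of structure and I would simply write ``We leave the proof to the reader,'' or give the two-sentence argument above.
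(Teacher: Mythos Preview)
Your overall strategy---reduce to Lemma~\ref{lemma:sing mfT mfs} via an equivariant identification---is the right one, and indeed the paper gives no proof for this lemma, presumably because it is meant to follow exactly as you describe, in analogy with Lemma~\ref{lemma:sfQ mfT mfs}. But your proposal contains a genuine error: there is no bijection $\sfQ_\mfT\to\sfS_\mfT$, for the simple reason that the dimensions do not match. An element of $\sfQ_\mfT$ carries the coordinate $\varkappa_\infty\in\ro$ (Definition~\ref{def:ma mfT mfs}), whereas an element of $\sfS_\mfT$ does not (Definition~\ref{def:mfS def}); so $\sfQ_\mfT$ has one more free parameter than $\sfS_\mfT$. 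Your phrase ``$\varkappa_\infty$ simply passes through'' papers over this: there is nowhere in $\sfS_\mfT$ for it to pass through to. A second, smaller slip: the constraint in $\sfQ_\mfT$ is $p_1^2+p_2^2+p_3^2+\Phi_1^2/9=1$ (equation~(\ref{eq:as Ham con pi Phione})), not $p_1^2+p_2^2+p_3^2+\Phi_1^2=1$, so the identity on $\Phi_1$ does not match the $\sfS$-constraint; one needs $\Phi_1\mapsto\Phi_1/3$. (This is why the paper's map $\sfF_\sfQ$ in (\ref{eq:sfFQ def}), defined just after the lemma, has $\Phi_1/3$ in the last slot---and note that $\sfF_\sfQ$ is not a diffeomorphism either, only a surjection with $\ro$-fibres.)

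The fix is immediate, and there are two clean ways to phrase it. Either observe that the map $(\bn_i,\sigma_i,\varkappa_\infty,\Phi_0,\Phi_1)\mapsto\big((\bn_i,p_i,\Phi_0,\Phi_1/3),\,\varkappa_\infty\big)$ is an equivariant diffeomorphism $\sfQ_\mfT\cong\sfS_\mfT\times\ro$, with $\Lambda_\mfT^\mfs$ acting as $\Gamma_\mfT^\mfs$ on the first factor and trivially on the second; freeness and proper discontinuity then follow at once from Lemma~\ref{lemma:sing mfT mfs}. Or, even more simply, note that since every $\xi_\lambda^\pm$ fixes the coordinates $(\varkappa_\infty,\Phi_0,\Phi_1)$, the fixed-point analysis of Lemmas~\ref{lemma:psi sigma plus isotropic sing}--\ref{lemma:sfB char per symm VIIz sing} carries over verbatim to $\sfQ_\mfT$, and the conclusion follows exactly as in the proof of Lemma~\ref{lemma:sing mfT mfs}. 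Your ``main obstacle'' about surjectivity onto $\sfS_\mfT$ is a red herring born of the dimension miscount; once the $\varkappa_\infty$-direction is accounted for, nothing further is needed.
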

Next, define
\begin{equation}\label{eq:sfFQ def}
  \begin{split}
    & \sfF_\sfQ(\bn_1,\bn_2,\bn_3,\sigma_1,\sigma_2,\sigma_3,\varkappa_\infty,\Phi_0,\Phi_1)\\
    := & (e^{2p_1\varkappa_\infty}\bn_1,e^{2p_2\varkappa_\infty}\bn_2,e^{2p_3\varkappa_\infty}\bn_3,p_1,p_2,p_3,\Phi_0+\Phi_1\varkappa_\infty/3,\Phi_1/3),
  \end{split}
\end{equation}
\index{$\a$Aa@Notation!Maps!sfFQ@$\sfF_\sfQ$}%
where $p_i=\sigma_i+1/3$. Then $\sfF_\sfQ$ maps $\sfQ_{\mfT}$ to $\sfS_{\mfT}$, see Definition~\ref{def:mfS def}. Moreover,
\begin{equation}\label{eq:sff sfQ comm}
  \psi_{\lambda}^{\pm}\circ\sfF_\sfQ=\sfF_\sfQ\circ\xi_\lambda^{\pm}.
\end{equation}
The reason for introducing $\sfF_\sfQ$ is that we, in the end, define $\bn_k$ as the limit of $\e_k e^{\nu_k}$, and then the $e^{2p_k\varkappa_\infty}\bn_k$
are the $o_k$ appearing in Definition~\ref{def:mfS def}. 

In some situations it is of interest to fix $\varkappa$ initially. This leads to the following terminology.
\begin{definition}\label{def:sfA fixed varkappa}
  Let $V\in C^{\infty}(\ro)$ and $\varkappa_0\in\ro$. Define $\sfA_{\mfT}^{\mfs}(\varkappa_0):=\sfA_{\mfT}^{\mfs}\cap \sfP_{\mfT}^{\mfs}(\varkappa_0)$.
  \index{$\a$Aa@Notation!Symmetry reduced sets of regular initial data!sfATsvak@$\sfA_{\mfT}^\mfs(\varkappa_0)$}%
  Similarly,
  for $\mft\in\{+,\roap,\ropp\}$, define $\sfA_{\mrIX,\mft}^{\mfs,\rond}:=\sfA_{\mrIX}^{\mfs}\cap \sfP_{\mrIX,\mft}^{\mfs,\rond}$
  \index{$\a$Aa@Notation!Symmetry reduced sets of regular initial data!sfAIXtsnd@$\sfA_{\mrIX,\mft}^{\mfs,\rond}$}%
  and
  $\sfA_{\mrIX,\mft}^{\mfs,\rond}(\varkappa_0):=\sfA_{\mrIX}^{\mfs}\cap \sfP_{\mrIX,\mft}^{\mfs,\rond}(\varkappa_0)$.
  \index{$\a$Aa@Notation!Symmetry reduced sets of regular initial data!sfAIXtsndvka@$\sfA_{\mrIX,\mft}^{\mfs,\rond}(\varkappa_0)$}%
\end{definition}
Next, we address the regularity of the map from regular initial data to asymptotic data. 
\begin{lemma}\label{lemma:reg of B}
  Let $0\leq V\in \mfP_{\a_V}^\infty$ for some $\a_{V}\in (0,1)$. Fix a Bianchi class A type $\mfT$ and $\mfs\in\{\iso,\roLRS,\roper,\rogen\}$. Then
  $\sfA_\mfT^\mfs$ is an open subset of $\sfP_\mfT^\mfs$. Fix
  \[
  \xi=(n_1,n_2,n_3,\sigma_1,\sigma_2,\sigma_3,\varkappa_0,\bpsi_0,\bpsi_1)\in\sfA_\mfT^\mfs.
  \]
  Let $\e_i=1$ if $n_i>0$, $\e_i=-1$ if $n_i<0$ and $\e_i=0$ if $n_i=0$. If $n_i\neq 0$, let $\nu_i(0)=\ln |n_i|$. If $n_i=0$, let $\nu_i(0)=0$.
  Finally, let $\Sigma_i(0)=\sigma_i$; $\varkappa(0)=\varkappa_0$; $\psi_0(0)=\bpsi_0$; and $\phi_1(0)=\bpsi_1$. Let $x$ be the corresponding solution
  to (\ref{seq:nui etc eqs}) and
  \begin{equation}\label{eq:x limit dos}
    (\nu_{1,\infty},\nu_{1,\infty},\nu_{1,\infty},\sigma_{1,\infty},\sigma_{2,\infty},\sigma_{3,\infty},\varkappa_\infty,\Phi_{0},\Phi_{1})
    =\lim_{\tau\rightarrow-\infty}x(\tau).
  \end{equation}
  Define
  \begin{equation}\label{eq:Psi mfT mfs def}
    \Psi_\mfT^\mfs(\xi):=(\e_1e^{\nu_{1,\infty}},\e_2e^{\nu_{2,\infty}},\e_3e^{\nu_{3,\infty}},\sigma_{1,\infty},\sigma_{2,\infty},\sigma_{3,\infty},
    \varkappa_\infty,\Phi_{0},\Phi_{1}). 
  \end{equation}
  Then $\Psi_\mfT^\mfs$ is defined on $\sfA_\mfT^\mfs$ and takes its values in $\sfQ_\mfT^{\mfs}$. Moreover, $\Psi_\mfT^\mfs$ is smooth and at each point
  $\xi\in \sfA_{\mfT}^\mfs$, $\Psi_{\mfT,*}^\mfs|_\xi$ is surjective. In particular, $\Psi_{\mfT}^\mfs$ is a local diffeomorphism.

  Next, let $\varkappa_0\in\ro$,
  $\xi_0\in\sfA_\mfT^\mfs(\varkappa_0)$ and assume that the $q$ corresponding to $\xi_0$ is $>-1$. Then there is an open neighbourhood of $\xi_0$ in
  $\sfA_\mfT^\mfs(\varkappa_0)$, say $U$, so that $\Psi_\mfT^\mfs$, restricted to $U$ and composed $\sfF_\sfQ$, see (\ref{eq:sfFQ def}),
  is a diffeomorphism onto its image.
\end{lemma}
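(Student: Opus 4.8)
The plan is to establish the final statement of Lemma~\ref{lemma:reg of B} by reducing it, via the Hamiltonian constraint, to an application of the inverse function theorem on the constrained submanifold $\sfA_\mfT^\mfs(\varkappa_0)$. The main claim to be proven is that the map $\sfF_\sfQ\circ\Psi_\mfT^\mfs$, restricted to a small enough neighbourhood $U$ of $\xi_0$ in $\sfA_\mfT^\mfs(\varkappa_0)$, is a diffeomorphism onto its image. Since $\sfF_\sfQ$ is a diffeomorphism from $\sfQ_\mfT^\mfs$ onto its image in $\sfS_\mfT^\mfs$ (one checks this directly from the explicit formula (\ref{eq:sfFQ def}), whose inverse is given by undoing the exponential rescalings once $\varkappa_\infty$ and the $p_i$ are read off), it suffices to show that $\Psi_\mfT^\mfs$ restricted to $U$ is a diffeomorphism onto its image in $\sfQ_\mfT^\mfs$. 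We already know from the first part of the lemma that $\Psi_\mfT^\mfs$ is smooth on all of $\sfA_\mfT^\mfs$ and that $\Psi_{\mfT,*}^\mfs|_\xi$ is surjective at every point; the issue is purely one of dimension count and injectivity after fixing $\varkappa_0$.

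First I would compare dimensions. The ambient constraint on $\sfP_\mfT$ and the condition $\sigma_1+\sigma_2+\sigma_3=0$ cut $\sfP_\mfT$ down by two, and fixing $\varkappa(0)=\varkappa_0$ removes one more, while $\sfQ_\mfT$ is cut down from $\rn{9}$ by the two relations $\sigma_1+\sigma_2+\sigma_3=0$ and (\ref{eq:as Ham con pi Phione}); so $\dim\sfA_\mfT^\mfs(\varkappa_0)=\dim\sfQ_\mfT^\mfs$, both equal to the dimension of the corresponding $\sfA_\mfT^\mfs$ minus one. The key point is that the limiting data $\Psi_\mfT^\mfs(\xi)$ automatically satisfies the asymptotic Hamiltonian constraint (\ref{eq:as Ham con pi Phione}) — this is exactly the content of the statement in Proposition~\ref{prop:variableexandunique} that the constructed solution satisfies (\ref{eq:constraint}), combined with the fact that along an admissible convergent solution the $N_i$, $\Omega$ and the corresponding polynomial terms in (\ref{eq:constraint asympt var}) all tend to zero, leaving precisely $\tfrac{3}{2}\sum\sigma_{i,\infty}^2+\tfrac16\Phi_1^2=1$, i.e. (\ref{eq:as Ham con pi Phione}) after the substitution $p_i=\sigma_i+1/3$. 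Hence $\Psi_\mfT^\mfs$ maps $\sfA_\mfT^\mfs(\varkappa_0)$ into $\sfQ_\mfT^\mfs$, and one needs the derivative of this restricted map to be an isomorphism at $\xi_0$.

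The derivative computation is where the hypothesis $q>-1$ at $\xi_0$ enters, and I expect this to be the main obstacle. Surjectivity of $\Psi_{\mfT,*}^\mfs|_{\xi_0}$ on the full space $\sfA_\mfT^\mfs$ is already given; I must show that restricting to the hyperplane $\{\varkappa(0)=\varkappa_0\}$ still yields a surjection onto $T\sfQ_\mfT^\mfs$. Equivalently, the one-dimensional direction transverse to $\sfA_\mfT^\mfs(\varkappa_0)$ inside $\sfA_\mfT^\mfs$ — the direction of varying $\varkappa_0$ — must map, under $\Psi_{\mfT,*}^\mfs$, into the span of the image of $T\sfA_\mfT^\mfs(\varkappa_0)$; that is, the kernel of the corank-one linear map $\Psi_{\mfT,*}^\mfs|_{\xi_0}:T\sfA_\mfT^\mfs\to T\sfQ_\mfT^\mfs$ is not tangent to the hyperplane $\{\varkappa(0)=\varkappa_0\}$. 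The natural way to see this is to track how the solution changes when one perturbs $\varkappa(0)$ alone, keeping the other initial data fixed: since $\varkappa'=2-q$ and $q$ depends on $\varkappa$ only through $\Omega=\tfrac92 e^{6\tau-2\varkappa}V(\dots)$ and the $N_i$ only through their mutual products, a shift $\varkappa(0)\mapsto\varkappa(0)+\delta$ produces, to leading order, a shift $\varkappa_\infty\mapsto\varkappa_\infty+\delta$ in the limit while the $\sigma_{i,\infty}$ and $\Phi_1$ change by $O(\delta\cdot(\text{exponentially small corrections}))$ — which vanish precisely because $\Omega$ and the $N_iN_j$ decay exponentially; the condition $q>-1$ guarantees $2-q<3$ so that $\varkappa+3\tau$ stays controlled and, more importantly, that $\theta>0$ and the time direction is non-degenerate, letting one appeal to the continuous dependence machinery of Section~\ref{section:abstract setting dos}. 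Thus the $\varkappa_0$-direction maps to the $\varkappa_\infty$-direction, which is genuinely transverse to the image of $T\sfA_\mfT^\mfs(\varkappa_0)$ — so $\Psi_{\mfT,*}^\mfs|_{\xi_0}$ restricted to $T\sfA_\mfT^\mfs(\varkappa_0)$ is injective, hence (by the dimension count) an isomorphism onto $T\sfQ_\mfT^\mfs$.

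With the derivative shown to be an isomorphism at $\xi_0$, the inverse function theorem supplies an open neighbourhood $U\subset\sfA_\mfT^\mfs(\varkappa_0)$ of $\xi_0$ on which $\Psi_\mfT^\mfs$ is a diffeomorphism onto its (open) image in $\sfQ_\mfT^\mfs$; composing with the diffeomorphism $\sfF_\sfQ$ then finishes the proof. The routine parts — openness of $\sfA_\mfT^\mfs(\varkappa_0)$ in $\sfP_\mfT^\mfs(\varkappa_0)$ (inherited from openness of $\sfA_\mfT^\mfs$ in $\sfP_\mfT^\mfs$ by intersecting with the closed submanifold $\{\varkappa(0)=\varkappa_0\}$, noting the latter is a submanifold since trivial data are excluded), smoothness of $\sfF_\sfQ$ and its restriction to $\sfQ_\mfT^\mfs$, and the explicit dimension bookkeeping — I would state without belabouring. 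The one genuinely delicate estimate, worth spelling out, is the claim that perturbing $\varkappa(0)$ leaves $\sigma_{i,\infty}$ and $\Phi_1$ unchanged to first order; this I would extract from differentiating the fixed-point equation $\Phi[x]=x$ of Section~\ref{section:abstract setting dos} with respect to the parameter, using the exponential weight $e^{-\e\tau}$ and the bounds (\ref{eq:mX bd and lip bd}) to conclude that the variation of the limit in the $\Sigma_i$ and $\phi_1$ slots is governed by an integral of exponentially decaying quantities, hence vanishes.
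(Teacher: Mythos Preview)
Your argument founders on the claim that $\sfF_\sfQ$ is a diffeomorphism onto its image. It is not: from the explicit formula (\ref{eq:sfFQ def}) one sees that $\sfF_\sfQ$ has one-dimensional fibres. Indeed, the curve $s\mapsto(\bn_ie^{6p_is},\sigma_i,\varkappa_\infty-3s,\Phi_0+\Phi_1 s,\Phi_1)$ lies in a single fibre, and this is exactly the image under $\Psi_\mfT^\mfs$ of the $\tau$-translation action on solutions. The paper exploits this by computing the transformation of $\nu_{k,\infty},\varkappa_\infty,\Phi_0$ under $\tau\mapsto\tau+\tau_1$ and verifying that $\sfF_\sfQ\circ\Psi_\mfT^\mfs$ is invariant. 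Your dimension count is correspondingly wrong: $\sfQ_\mfT^\mfs$ has the \emph{same} dimension as $\sfA_\mfT^\mfs$ (both are cut from $\rn{9}$ by two relations), not one less, so the restriction $\Psi_\mfT^\mfs|_{\sfA_\mfT^\mfs(\varkappa_0)}$ maps a $6$-dimensional space into a $7$-dimensional one and can never be onto.

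What actually needs to be shown is that the one-dimensional kernel of $(\sfF_\sfQ\circ\Psi_\mfT^\mfs)_*$ on $\sfA_\mfT^\mfs$ --- namely the tangent to the Wainwright--Hsu flow --- is transverse to the slice $\{\varkappa_0=\text{const}\}$. Since the flow moves $\ln\theta(0)$ with derivative $-(1+q(0))$, this transversality is precisely the condition $q(0)>-1$; so your instinct that $q>-1$ is the crucial hypothesis was correct, but you attached it to the wrong mechanism (it has nothing to do with controlling $\varkappa+3\tau$). The paper implements this by using the implicit function theorem to flow $\sfA_\mfT^\mfs(\varkappa_0)$ diffeomorphically onto $\sfA_\mfT^\mfs(\ln\theta_0)$ for a large $\theta_0$, invoking the $\tau$-translation invariance of $\sfF_\sfQ\circ\Psi_\mfT^\mfs$ to change the initial time to a $\tau_0$ near $-\infty$, and then repeating verbatim the explicit vector-field surjectivity argument ($Y_i$, $W$, $P_\pm$) from the first part of the proof. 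A cleaner linear-algebra variant along the lines sketched above is possible, but your proposal as written does not get there.
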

\begin{remark}\label{remark:Psi T s Cm}
  If one is only interested in proving that $\Psi_\mfT^\mfs$ is $C^m$, it is sufficient to demand that $V\in\mfP_{\a_V}^{m+1}$. 
\end{remark}
\begin{proof}
  Fix an admissible convergent solution, say $S:=(\No,\Nt,\Nth,\Sigma_1,\Sigma_2,\Sigma_3,\theta,\phi_{0},\phi_{1})$. In the arguments to follow,
  it is convenient to replace the $\Sigma_i$ by $\Sigma_\pm$, see (\ref{eq:SigmaiitoSpm}), and we do so from now on. Moreover, in what follows,
  if condition (ii) of Definition~\ref{def:admconvsol} is satisfied, we assume that $\Sm=0$ and $N_2=N_3$. Consider (\ref{eq:constraint}).
  If condition (i) of Definition~\ref{def:admconvsol} is satisfied, then the assumptions imply that the polynomial in the $N_i$ on the left hand side
  of (\ref{eq:constraint}) can be estimated by $Ce^{2\a_{\infty}\tau}$ for all $\tau\leq T$, where $C$ only depends on $C_\infty$. This means that there is
  a constant $C$ such that $|\phi_1(\tau)|\leq \sqrt{6}+Ce^{2\a_{\infty}\tau}$ for all $\tau\leq T$, where $C$ only depends on $C_{\infty}$. Combining this
  observation with the fact that $\ln\theta(\tau)\geq -3\tau-C_{\theta}$, it is clear that there are constants $C_{\Phi}$ and $C$ such that
  \begin{equation}\label{eq:phizOmegaestbg}
    |\phi_{0}(\tau)|\leq \sqrt{6}|\tau|+C_{\Phi},\ \ \
    \Omega(\tau)\leq Ce^{6(1-\a_V)\tau}
  \end{equation}
  for all $\tau\leq T$, where $C$ only depends on $c_{0}$, $C_{\theta}$ and $C_{\Phi}$, and $c_0$ is the constant appearing in (\ref{eq:V k-derivatives exp estimate}).
  If condition (ii) of Definition~\ref{def:admconvsol}
  is satisfied, then $\Phi_1=0$. Thus (\ref{eq:phizOmegaestbg}) holds for all $\tau\leq T$. Next, note that if condition
  (i) of Definition~\ref{def:admconvsol} is satisfied, then combining (\ref{eq:phizOmegaestbg}) with the fact that the polynomial in the $N_i$ on the
  left hand side of (\ref{eq:constraint}) decays exponentially yields the conclusion that (\ref{eq:as Ham con pi Phione}) holds (or, equivalently,
  that $\sigma_+^2+\sigma_-^2+\Phi_1^2/6=1$, where $\sigma_\pm$ are the limits of $\Sigma_\pm$). If condition (ii) of
  Definition~\ref{def:admconvsol} is satisfied, then this equality holds automatically. Combining this observation with (\ref{eq:qdef}) yields the
  conclusion that $q\rightarrow 2$ as $\tau\rightarrow-\infty$, so that for any $\beta<6$, there is a $C_\beta$ such that $|\No(\tau)|\leq C_\beta e^{\b\tau}$
  and $|(\No\Nt)(\tau)|\leq C_\beta e^{\b\tau}$ for all $\tau\leq T$ (assuming (ii) holds). To summarise, regardless of whether condition (i) or (ii) holds,
  $q-2$, $\Sp-\sigma_{+}$ and $\Sm-\sigma_{-}$ converge to zero exponentially. In particular, if condition (ii) holds, there is a constant
  $C_\infty$ such that $|\No(\tau)|\leq C_\infty e^{6\tau}$ and $|(\No\Nt)(\tau)|\leq C_\infty e^{6\tau}$ for all $\tau\leq T$. 

  Fix an $\e>0$ and a $\tau_0\leq T$. Next, write (\ref{seq:nui etc eqs}) schematically as $x'(\tau)=\mX[x(\tau),\tau]$. Let $x$ denote the solution to
  (\ref{seq:nui etc eqs}) corresponding to the solution $S$ and let $\bx$ denote another solution to (\ref{seq:nui etc eqs}) arising from initial
  data in $\sfP$ (so that the constraint (\ref{eq:constraint}) is satisfied due to (\ref{eq:fdiffeq})). Assume the $\e_i$'s
  which are implicit in the definition of the right hand side of (\ref{seq:nui etc eqs}) to be the same for $x$ and $\bx$. Assume, moreover, $\bx$ to
  correspond to an LRS solution with $\bNt=\bNth$ and $\bSm=0$ in case $S$ satisfies condition (ii) of Definition~\ref{def:admconvsol}. We also use the
  notation
  \[
  x=(\nu_1,\nu_2,\nu_3,\Sp,\Sm,\varkappa,\psi_{0},\phi_{1}),\ \ \ \bx=(\bnu_1,\bnu_2,\bnu_3,\bSp,\bSm,\bvka,\bpsi_{0},\bphi_{1}).
  \]
  Assume that $|\bx(\tau_0)-x(\tau_0)|<\e$ and let
  \[
  \ma:=\{\tau\leq\tau_0 : |\bx(s)-x(s)|<2\e\ \forall\ s\in [\tau,\tau_0]\}.
  \]
  Clearly, $\ma$ is an open, connected and non-empty subset of $(-\infty,\tau_0]$. Thus $\ma=(\tau_-,\tau_0]$ for some $\tau_-\in [-\infty,\tau_0)$.
  Assume $\tau_->-\infty$ and estimate, for $\tau\in\ma$,
  \[
  |\bN_i|=|N_i|e^{\bnu_i-\nu_i}e^{f_i(0,\bSp-\Sp,\bSm-\Sm)\tau}\leq e^{-8\e\tau+2\e}|N_i|.
  \]
  If condition (i) of Definition~\ref{def:admconvsol} is satisfied and $N_i\neq 0$, it follows that $|\bN_i(\tau)|\leq C_\infty e^{(\a_\infty-8\e)\tau+2\e}$
  for all $\tau\in\ma$. Moreover, if condition (ii) of Definition~\ref{def:admconvsol} is satisfied, then
  \[
  |\bNo(\tau)|\leq C_\infty e^{(6-8\e)\tau+2\e},\ \ \
  |(\bNo\bNt)(\tau)|\leq C_\infty e^{(6-16\e)\tau+4\e}
  \]
  for all $\tau\in\ma$. Assuming $\e>0$ to be small enough, we can thus assume that there is an $\eta>0$, which is numerical if condition (ii) holds
  and which depends only on $\a_\infty$ in case condition (i) holds, and a constant $C$, depending only on $C_\infty$ such that
  \[
  |\bS_\pm(\tau)|+|2-\bq(\tau)-\bOmega(\tau)|\leq Ce^{2\eta\tau}
  \]
  for all $\tau\in\ma$; see (\ref{eq:altqdef}). Next, note that
  \[
  \ln\bth(\tau)\geq\ln\theta(\tau)-2\e\geq -3\tau-C_\theta-2\e
  \]
  for all $\tau\in\ma$. Thus
  \[
  \tfrac{V\circ\bphi_0(t)}{\bth^2(\tau)}+\tfrac{|V'\circ\bphi_0(t)|}{\bth^2(\tau)}\leq Ce^{6(1-\a_V)\tau}e^{\sqrt{6}\a_V|\bphi_0(\tau)-\phi_0(\tau)|}
  \leq Ce^{[6(1-\a_V)-2\sqrt{6}\a_V\e]\tau}
  \]
  for all $\tau\in\ma$, where $C$ only depends on $c_1$, $C_\theta$ and $C_{\Phi}$. Assuming $\e>0$ to be small enough (depending only on $\a_V$ and,
  in case condition (i) holds, $\eta_\infty$), we conclude that there is an $\eta>0$ with the same dependence such that
  \begin{equation}\label{eq:bSpm etc exp dec}
    |\bS_\pm(\tau)|+|2-\bq(\tau)|+\bOmega(\tau)+\tfrac{|V'\circ\bphi_0(t)|}{\bth^2(\tau)}\leq Ce^{2\eta\tau}
  \end{equation}
  for all $\tau\in\ma$, where $C$ only depends on $c_1$, $C_\theta$, $C_\infty$ and $C_{\Phi}$. Next, note that due to these estimates,
  (\ref{eq:constraint}) and (\ref{eq:altqdef}), it follows that there is a constant $C$, depending only on $c_1$, $C_\theta$, $C_\infty$ and $C_{\Phi}$,
  such that
  \begin{equation}\label{eq:bSp bSm bphi one bd}
    |\bSp(\tau)|+|\bSm(\tau)|+|\bphi_1(\tau)|\leq C
  \end{equation}
  for all $\tau\in\ma$. Combining (\ref{seq:nui etc eqs}), (\ref{eq:bSpm etc exp dec}) and (\ref{eq:bSp bSm bphi one bd}), it is clear that
  $|\bx'(\tau)|\leq C\ldr{\tau}e^{2\eta\tau}$ for all $\tau\in\ma$, so that $|\bx(\tau)-\bx(\tau_0)|\leq C\ldr{\tau_0}e^{2\eta\tau_0}$ for all
  $\tau\in\ma$. Note that this estimate holds for $x$ as well. This means that
  \begin{equation}\label{eq:bxtau minus xtau}
    |\bx(\tau)-x(\tau)|\leq \e+C_X\ldr{\tau_0}e^{2\eta\tau_0}
  \end{equation}
  for all $\tau\in\ma$. Here $C_X$ only depends on $c_1$, $C_\theta$, $C_\infty$, $C_{\Phi}$, $\a_V$ and, in case condition (i) holds, $\eta_\infty$.
  Moreover, this estimate is based on the assumption that $\e$ is small enough, the bound depending only on $\a_V$ and, in case condition (i) holds,
  $\eta_\infty$. Assuming $\tau_0$ to be close enough to $-\infty$, the bound depending only on $\e$, $c_1$, $C_\theta$, $C_\infty$, $C_{\Phi}$, $\a_V$ and, in
  case condition (i) holds, $\eta_\infty$, it is clear that it can be ensured that the right hand side of (\ref{eq:bxtau minus xtau}) is
  $\leq 3\e/2$. To summarise the above argument, we first fix a small enough $\e>0$, depending only on $\a_V$ and, in case condition (i) holds,
  $\eta_\infty$. Given this $\e>0$, we fix a $\tau_0$ close enough to $-\infty$, the bound depending only on $\e$, $c_1$, $C_\theta$, $C_\infty$, $C_{\Phi}$,
  $\a_V$ and, in case condition (i) holds, $\eta_\infty$. Assume that $|\bx(\tau_0)-x(\tau_0)|<\e$ and define $\ma$ as above. Then $\ma$ is of the form
  $\ma=(\tau_-,\tau_0]$ for some $\tau_-<\tau_0$. Moreover, the above argument yields the conclusion that if $\tau_->-\infty$, then
  $|\bx(\tau)-x(\tau)|\leq 3\e/2$ on $\ma$, so that there is a $\tau_1<\tau_-$ such that $\tau_1\in\ma$, in contradiction of the definition of $\tau_-$.
  Thus $\tau_-=-\infty$.
  
  By the above argument, we conclude that if condition (i) is satisfied, then there is an open neighbourhood of $x(\tau_0)$ (in the constraint
  hypersurface represented by (\ref{eq:constraint asympt var}), with $\tau$ replaced by $\tau_0$, and the same $\e_i$'s as in the case of $x$)
  such that the solutions to (\ref{seq:nui etc eqs}) arising from initial data in this neighbourhood give rise to admissible convergent solutions
  satisfying condition (i). Moreover, if condition (ii) is satisfied, then there is an open neighbourhood of $x(\tau_0)$ in the
  set of LRS Bianchi type VIII or IX initial data such that the corresponding solutions satisfy either condition (i) or (ii). In order to translate
  this observation to a statement concerning $\sfA_\mfT^\mfs$, note that, given $p\in\sfP_\mfT^\mfs$, there is an open neighbourhood of $p$ in
  $\sfP_\mfT^\mfs$, say $U$, such that the $\e_i$'s of the points in $U$ coincide with the $\e_i$'s of $p$. This means that the map from $U$ to initial
  data for (\ref{seq:nui etc eqs}) is a smooth map; cf. the statement of the lemma. This proves that $\sfA_\mfT^\mfs$ is an open subset of $\sfP_\mfT^\mfs$.
  Next, let us prove that the image of $\sfA_\mfT^\mfs$ under $\Psi_\mfT^\mfs$ is a subset of $\sfQ_\mfT$. That the Bianchi type is preserved is an
  immediate consequence of the definition of $\Psi_\mfT^\mfs$. That the limit of the sum of the $\Sigma_i$ is zero is an immediate consequence of the
  fact that the sum of the $\Sigma_i$ is zero. We have already verified that (\ref{eq:as Ham con pi Phione}) holds. If
  $\{i,j,k\}=\{1,2,3\}$ and $p_i=1$, then we can assume that $(i,j,k)=(1,2,3)$, so that $\sigma_1=2/3$ and $\sigma_2=\sigma_3=-1/3$. This means that
  $\sigma_+=-1$ and $\sigma_-=0$. Proposition~\ref{prop:limitcharsp} then implies that $\Nt=\Nth$ and $\Sm=0$. This means that $\bn_2=\bn_3$, as
  required by Definition~\ref{def:ma mfT mfs}. Next, assume, in order to obtain a contradiction, that all the $p_l<1$, that $\bn_1\neq 0$ and
  that $p_1\leq 0$. This means that $\sigma_+\geq 1/2$ so that $q-4\Sp$ converges to a limit which is $\leq 0$. However, this is not consistent with
  $\No$ converging to zero exponentially at a fixed rate. 
  Finally, that the image of $\Psi_\mfT^\mfs$ is at least as symmetric as the initial data is
  an immediate consequence of the definition. However, the image could potentially be more symmetric. That this is not possible follows from the
  fact that symmetries of the asymptotic data imply the corresponding symmetries of the solution; see Proposition~\ref{prop:variableexandunique}.
    
  Let $\xi_0\in\sfA_\mfT^\mfs$. By the above, there is an open neighbourhood of $\xi_0$, say $U\subset\sfP_\mfT^\mfs$, such that $U\subset\sfA_\mfT^\mfs$.
  Given $\xi\in U$, we denote by $x(\cdot;\xi)$ the solution $x$ to  (\ref{seq:nui etc eqs}) corresponding to initial data $x(0)$ obtained from
  $\xi$ as described in the statement of the lemma. When we write $\Sigma_k$ etc. in what follows, we view $\Sigma_k$ etc. as being defined on
  $(-\infty,0]\times U$. In the arguments to follow, we also prefer to phrase the equations in terms of $\Sigma_\pm$ as opposed to $\Sigma_i$;
  cf. Remark~\ref{remark:Sigmapm vs Sigmai}.  If condition (i) of Definition~\ref{def:admconvsol} is satisfied, the assumptions and the above
  arguments imply that, assuming $U$ to be small enough and $\tau_0$ to be close enough to $-\infty$, there are constants $\eta>0$, $C$ and $C_k$
  such that if $N_i\neq 0$ and $k\in\nn{}$, then 
  \begin{subequations}\label{seq:fi Ni theta asymptotics}
    \begin{align}
      e^{\mu_i} \leq & Ce^{\eta\tau},\label{eq:Ni est uniform}\\
      \Sp^2+\Sm^2+\phi_1^2 \leq & C,\label{eq:Sp Sm phi one est uniform}\\
      e^{-2\varkappa+6\tau}|V^{(k)}\circ\phi_0| \leq & C_ke^{\eta\tau}\label{eq:Vk uniform est}
    \end{align}
  \end{subequations}
  on $(-\infty,\tau_0]\times U$. In case condition (ii) of Definition~\ref{def:admconvsol} is satisfied, $\mfT\in\{\mrVIII,\mrIX\}$ and
  $\mfs=\roLRS$. This means that we can assume $\Sm=0$ and $\Nt=\Nth$ on $(-\infty,\tau_0]\times U$. Moreover, we can assume 
  (\ref{eq:Sp Sm phi one est uniform}) and (\ref{eq:Vk uniform est}) to still hold. However, (\ref{eq:Ni est uniform}) is replaced by 
  \begin{equation}\label{eq:no nont exp dec}
    |N_1|+|N_1N_2|\leq Ce^{\eta\tau}
  \end{equation}
  on $(-\infty,\tau_0]\times U$. In order to analyse the regularity in the initial data, let $\{E_i\}$ denote a frame of vector
  fields on $U$ and let $E_{\bfI}=E_{i_{1}}\cdots E_{i_k}$ if $\bfI=(i_{1},\dots,i_k)$, where $i_j\in\{1,2,3\}$ for $j\in\{1,\dots,k\}$.
    
  Due to (\ref{eq:altqdef}), (\ref{seq:fi Ni theta asymptotics}) and (\ref{eq:no nont exp dec}), the right hand sides of (\ref{seq:nui etc eqs})
  have uniform exponential decay, irrespective of whether condition (i) or condition (ii) holds. More
  specifically, the right hand sides are, in absolute value, bounded by $C\ldr{\tau}e^{\eta\tau}$ for $\tau\leq\tau_0$.
  Applying $E_i$ to (\ref{seq:nui etc eqs}) yields a linear system of equations for $E_i\nu_j$, $E_i\Sigma_\pm$ etc. However, each
  expression $E_i\nu_j$ etc. on the right hand side is multiplied by a uniformly exponentially decaying factor. Defining an appropriate energy
  (summing the squares of $E_i\nu_j$, $E_i\Sigma_\pm$ etc.) then yields the conclusion that the energy is bounded to the past. This argument
  can then be iterated to conclude that all the derivatives $E_\bfI\nu_i$ etc. are uniformly bounded to the past. Combining these
  observations with (\ref{seq:nui etc eqs}) then yields the conclusion that all the $E_\bfI\nu_i$ etc. converge uniformly and
  exponentially. This means that the limits of $\nu_i$ etc. are all smooth functions on $U$. In particular, the map $\Psi_\mfT^\mfs$ is smooth.
  
  Next, we wish to prove that the differential of $\Psi_{\mfT}^\mfs$ is always surjective. Recall, to this end, the constraint
  (\ref{eq:constraint asympt var}). In what follows, we consider the evolution from initial data at $\tau_0$ to asymptotic data. We therefore
  focus on $V_0:=\{x(\tau_0;\eta):\eta\in U\}$. Note that we can assume $\tau_0$ to be close enough to $-\infty$ that
  \begin{equation}\label{eq:lower bd essential variables}
    \Sp^2+\Sm^2+\tfrac{1}{6}\phi_1^2\geq \tfrac{1}{2}
  \end{equation}
  in $V_0$. Fix a $\zeta\in V_0$ and let $\d_{\nu_i}$, $i=1,2,3$, be vector fields on $V_0$ corresponding to differentiation in the $\nu_i$ directions (so
  that if we view $\Sp$ etc. as coordinates on $\rn{8}$, then $\d_{\nu_i}\Sp=0$ etc.). In general, the vector fields $\d_{\nu_i}$ are not tangential to
  the hypersurface defined by (\ref{eq:constraint asympt var}). In order to obtain a vector field which is tangential, let
  \begin{equation}\label{eq:Xr def}
    X_r:=\Sp\d_{\Sp}+\Sm\d_{\Sm}+\phi_1\d_{\phi_1}
  \end{equation}
  in case $\mfs=\rogen$. If $\mfs\in\{\roLRS,\roper\}$, we can assume $\Sm=0$ and then $X_r$ should be replaced by $\Sp\d_{\Sp}+\phi_1\d_{\phi_1}$.
  If $\mfs=\iso$, $\Sp=\Sm=0$ and $X_r$ should be replaced by $\phi_1\d_{\phi_1}$. In what follows, we assume $X_r$ to take the form (\ref{eq:Xr def})
  and leave it to the reader to carry out the corresponding modifications in case $\mfs\neq\rogen$. Note that
  \[
  X_r\left(\Sp^2+\Sm^2+\tfrac{1}{6}\phi_1^2\right)=2\left(\Sp^2+\Sm^2+\tfrac{1}{6}\phi_1^2\right)\geq 1,
  \]
  where we appealed to (\ref{eq:lower bd essential variables}). For some function $a_i$, apply $Y_i=a_i X_r+\d_{\nu_i}$ to the right hand
  side of (\ref{eq:constraint asympt var}) with $\tau$ replaced by $\tau_0$, say $F$. This yields
  \[
    Y_i(F) = 2a_i\left(\Sp^2+\Sm^2+\tfrac{1}{6}\phi_1^2\right)+a_i R_1+R_2,
  \]
  where $\{i,j,k\}=\{1,2,3\}$ and
  \begin{align*}
    R_1 := & 3\phi_1\tau_0 e^{6\tau_0-2\varkappa}V'(\psi_0+\phi_1\tau_0)+9\Sigma_1\tau_0\e_1^2 e^{2\mu_1}+9\Sigma_2\tau_0\e_2^2 e^{2\mu_2}
    +9\Sigma_3\tau_0\e_3^2 e^{2\mu_3}\\
    & -9(\Sigma_1+\Sigma_2)\tau_0 \e_1\e_2 e^{\mu_1+\mu_2}
    -9(\Sigma_2+\Sigma_3)\tau_0 \e_2\e_3 e^{\mu_2+\mu_3}-9(\Sigma_1+\Sigma_3)\tau_0 \e_1\e_3 e^{\mu_1+\mu_3},\\
    R_2 := & \tfrac{3}{2}(\e_i^2 e^{2\mu_i}-\e_i\e_j e^{\mu_i+\mu_j}-\e_i\e_k e^{\mu_i+\mu_k}).
  \end{align*}
  The goal is to choose $a_i$ so that $Y_i(F)=0$. However, assuming $\tau_0$ to be close enough to $-\infty$, we can assume that
  $|R_1|\leq 1/2$. This means that we can define $a_i$ by
  \[
  a_i:=-\frac{R_2}{2\left(\Sp^2+\Sm^2+\tfrac{1}{6}\phi_1^2\right)+R_1}.
  \]
  Then $Y_i$ is tangential to the constraint hypersurface at $\tau=\tau_0$, but typically not the constraint hypersurfaces at other times;
  note that $Y_i$ is independent of $\tau$. Moreover, $|a_i|\leq Ce^{\eta\tau_0}$, where we appealed to (\ref{eq:Ni est uniform}) and
  (\ref{eq:no nont exp dec}). From now on, we consider $Y_i$ to be a vector field on $V_0$ and we define
  $x:(-\infty,\tau_0]\times V_0\rightarrow\rn{8}$ by the condition that $x(\cdot;\zeta)$ is the solution to
  (\ref{seq:nui etc eqs}) with $x(\tau_0;\zeta)=\zeta$. Next, we consider the evolution equations for $Y_i x$. Applying $Y_i$ to
  (\ref{seq:nui etc eqs}), it can be estimated that
  \begin{equation}\label{eq:Yi x ev eqn}
    |(Y_i x)'|\leq C\ldr{\tau}^2e^{\eta\tau}|Y_i x|
  \end{equation}
  for $\tau\leq\tau_0$. Since $|(Y_i x)(\tau_0;\zeta)|$ can be assumed to be arbitrarily close to $1$ by assuming $\tau_0$ to be close enough to
  $-\infty$, (\ref{eq:Yi x ev eqn}) implies that $|(Y_i x)(\tau;\zeta)|\leq 2$ for $\tau\leq\tau_0$, assuming $\tau_0$ to be close enough to
  $-\infty$. Inserting this information into (\ref{eq:Yi x ev eqn}) implies that
  \[
  \big|\lim_{\tau\rightarrow-\infty}(Y_ix)(\tau;\zeta)-(Y_ix)(\tau_0;\zeta)\big|\leq C\ldr{\tau_0}^2e^{\eta\tau_0}.
  \]
  However, up to an error of the order of magnitude $O(e^{\eta\tau_0})$, $(Y_ix)(\tau_0;\zeta)$ is the unit vector in the direction of $\nu_i$. This
  means that if $\e_i\neq 0$, then $\d_{\bn_i}+O(\ldr{\tau_0}^2e^{\eta\tau_0})$ is in the image of $(\Psi_\mfT^\mfs)_*$ (where we use the standard
  Euclidean metric on $\rn{8}$ to measure the size of vector fields); here and below, we denote the coordinates on the target $\rn{8}$ by $\bn_i$,
  $\sigma_\pm$, $\varkappa_\infty$, $\Phi_0$ and $\Phi_1$, where $\bn_i=0$ in case $\e_i=0$. By a similar argument,
  $\d_{\varkappa_\infty}+O(\ldr{\tau_0}^2e^{\eta\tau_0})$ and $\d_{\Phi_0}+O(\ldr{\tau_0}^2e^{\eta\tau_0})$ are in the image of $(\Psi_\mfT^\mfs)_*$.
  Note that this proves surjectivity in the case that $\mfs=\iso$. 

  Next, assume $\mfs=\rogen$ and consider the vector fields
  \[
    W:=\Sm\d_{\Sp}-\Sp\d_{\Sm},\ \ \
    P_{\pm}:=\tfrac{1}{6}\phi_1\d_{\Sigma_{\pm}}-\Sigma_{\pm}\d_{\phi_1}.
  \]
  Unfortunately, $W(F)$ and $P_{\pm}(F)$ are in general non-zero. However, this can be corrected by adding a small multiple of $X_r$, as before.
  Arguments similar to the above then yield the conclusion that $w+O(\ldr{\tau_0}^2e^{\eta\tau_0})$ and $p_\pm+O(\ldr{\tau_0}^2e^{\eta\tau_0})$ are in
  the image of $(\Psi_\mfT^\mfs)_*$, where 
  \[
  w:=\sigma_-\d_{\sigma_+}-\sigma_+\d_{\sigma_-},\ \ \
  p_\pm:=\tfrac{1}{6}\Phi_1\d_{\sigma_{\pm}}-\sigma_{\pm}\d_{\Phi_1}.
  \]
  Note also that if $(\sigma_+,\sigma_-)=0$, then $|\Phi_1|=\sqrt{6}$ and $p_\pm$ span a two dimensional subset of the corresponding tangent space
  of $\sfQ_\mfT^\mfs$ (the same is true if $(\sigma_+,\sigma_-)$ is close to zero). If $(\sigma_+,\sigma_-)\neq 0$, then $w$ and
  \[
  \sigma_+p_++\sigma_- p_-=\tfrac{1}{6}\Phi_1(\sigma_+\d_{\sigma_{+}}+\sigma_-\d_{\sigma_-})-(\sigma_{+}^2+\sigma_-^2)\d_{\Phi_1}
  \]
  are orthogonal and both non-zero. For the above reasons, the map $(\Psi_\mfT^\mfs)_*$ is surjective at each point in $U$ of $\mfs=\rogen$. In case
  $\mfs=\iso$, we already know $(\Psi_\mfT^\mfs)_*$ to be surjective. In case $\mfs\in \{\roper,\roLRS\}$, a similar argument yields the same conclusion;  
  it is sufficient to focus $P_+$. In particular, $\Psi_\mfT^\mfs$ is a local diffeomorphism.

  Next, let $\varkappa_0\in\ro$, $\xi_0\in\sfA_\mfT^\mfs(\varkappa_0)$ and assume that $q(0;\xi_0)>-1$. Due to the above arguments, there is an open neighbourhood,
  say $U$, of $\xi_0$
  in $\sfA_\mfT^\mfs$ such that $q(0;\xi)>-1$ for $\xi\in U$ and $q$ converges uniformly to $2$ in $U$. This means that there is a $T_0\leq 0$ such that
  $q(\tau;\xi)\geq 1$ for all $\tau\leq T_0$ and $\xi\in U$. In particular,  $\theta'(\tau;\xi)\leq -2\theta(\tau;\xi)<0$ for all $\tau\leq T_0$
  and $\xi\in U$. Assume that $\tau_0\leq T_0$ and that $\theta(\tau_0;\xi_0)=\theta_0$. Then, due to the implicit function theorem, there is a
  neighbourhood $W$ in $\sfA_{\mfT}^{\mfs}(\varkappa_0)$ of $\xi_0$ and a smooth map $T:W\rightarrow\ro$ such that $\theta(T(\xi);\xi)=\theta_0$. Next,
  note that we can think of $\xi$ as initial data for (\ref{eq:thetaprime})--(\ref{eq:constraint}); $N_i(0):=n_i$,
  $\Sigma_i(0):=\sigma_i$, $\phi_0(0):=\bpsi_0$, $\phi_1(0):=\bpsi_1$, $\theta(0):=e^{\varkappa_0}$. We denote the corresponding solution by
  $X(\tau;\xi)$. Since the flow associated with (\ref{eq:thetaprime})--(\ref{eq:constraint}) is smooth, we obtain a smooth map $\Xi_{\roloc}$ defined by
  $\Xi_{\roloc}(\xi):=X[T(\xi);\xi]$ for $\xi\in W$. Taking the logaritm of the mean curvature, we can interpret $\Xi_{\roloc}(\xi)$ as an element of
  $\sfA_\mfT^\mfs(\ln\theta_0)$. By reversing the flow and restricting the domain, if necessary, it is clear that $\Xi_{\roloc}$ has a smooth inverse (recall
  that $\theta'(0;\xi_0)<0$), so that it is a diffeomorphism onto its image. Next, note that we can assume estimates such as
  (\ref{seq:fi Ni theta asymptotics}) and (\ref{eq:no nont exp dec}) to hold uniformly on $W$ to the past of $T_0$. Assume, from now on, that
  $\Xi_{\roloc}$ is a diffeomorphism from $W\subset \sfA_\mfT^\mfs(\varkappa_0)$ to $W_2\subset \sfA_\mfT^\mfs(\ln\theta_0)$. Let $\zeta\in W_2$ and
  $\xi:=\Xi_{\roloc}^{-1}(\zeta)$. We would like to start from $\zeta$ and evolve backwards in time. However, this leads to a problem: If we wish to
  focus on solutions to (\ref{seq:nui etc eqs}) corresponding to the initial data given in the statement of the lemma and then to calculate the
  limit (\ref{eq:x limit dos}), we should strictly speaking use $T(\xi)$ as
  the starting time. This causes complications when carrying out arguments as the one used above to prove that $(\Psi_\mfT^\mfs)_*$ is surjective.
  On the other hand, the map we are interested in is such that a change in initial time is immaterial. In order to justify this statement,
  consider two solutions to (\ref{eq:thetaprime})--(\ref{eq:constraint}) denoted by $X$ and $\bar{X}$ and with the property that
  $\bar{X}(\tau)=X(\tau+\tau_1)$; i.e., $\bar{N}_i(\tau)=N_i(\tau+\tau_1)$ etc. For $i$ such that $\e_i\neq 0$, it can then be verified that
  \[
  \bar{\nu}_i(\tau)=\nu_i(\tau+\tau_1)+6[\Sigma_i(\tau+\tau_1)+1/3]\tau_1.
  \]
  Similarly,
  \[
  \bar{\varkappa}(\tau)=\varkappa(\tau+\tau_1)-3\tau_1,\ \ \
  \bar{\psi}_0(\tau)=\psi_0(\tau+\tau_1)+\phi_1(\tau+\tau_1)\tau_1.
  \]
  Finally, $\bar{\Sigma}_i(\tau)=\Sigma_i(\tau+\tau_1)$ and $\bar{\phi}_1(\tau)=\phi_1(\tau+\tau_1)$. Using notation analogous to
  (\ref{eq:x limit dos}), it is then clear that $\bar{\sigma}_{i,\infty}=\sigma_{i,\infty}$, that $\bar{\Phi}_1=\Phi_1$ and that if $k$ is such
  that $N_k\neq 0$, then 
  \[
  \bar{\nu}_{k,\infty}=\nu_{k,\infty}+6(\sigma_{k,\infty}+1/3)\tau_1,\ \ \
  \bar{\varkappa}_\infty=\varkappa_\infty-3\tau_1,\ \ \
  \bar{\Phi}_0=\Phi_0+\Phi_1\tau_1.
  \]
  Given these observations, it can be verified that $\sfF_\sfQ\circ\Psi_\mfT^\mfs$ is independent of initial translations in $\tau$. We can thus
  assume the initial data $\zeta\in W_2$ to be specified at $\tau=\tau_0$ instead of at $\tau=T(\xi)$. In what follows, we ignore $\nu_i$ in case
  $\e_i=0$. Assume that
  \[
  \zeta=(n_1,n_2,n_3,\sigma_1,\sigma_2,\sigma_3,\varkappa_0,\bpsi_0,\bpsi_1).
  \]
  If $n_i\neq 0$, define $\e_i$ and $\mu_i$ by $n_i=\e_i e^{\hat{\mu}_i}$ and define
  \begin{align*}
    \nu_i(\tau_0) := & \hat{\mu}_i-6(\sigma_i+1/3)\tau_0,\ \ \Sigma_i(\tau_0) := \sigma_i,\ \ \varkappa(\tau_0) := \varkappa_0+3\tau_0,\\
    \psi_0(\tau_0) := & \bpsi_0-\bpsi_1\tau_0,\ \ \phi_1(\tau_0) :=  \bpsi_1.
  \end{align*}
  Let $x$ denote the solution to (\ref{seq:nui etc eqs}) corresponding to these initial data at $\tau=\tau_0$. Since the distance between
  $T(\xi)$ and $\tau_0$ can be assumed to be less than $1$, we can assume estimates such as (\ref{seq:fi Ni theta asymptotics}) and
  (\ref{eq:no nont exp dec}) to hold uniformly on $W_2$ to the past of $\tau_0$. This means that we can carry out arguments similar to
  those used to prove that $(\Psi_\mfT^\mfs)_*$ is surjective in order to prove that the pushforward of $\sfF_\sfQ\circ\Psi_\mfT^\mfs$, when
  restricted to $\sfA_{\mfT}^{\mfs}(\varkappa_0)$, is surjective. This map thus defines a local diffeomorphism from a neighbourhood $W_1$ of
  $\xi\in \sfA_{\mfT}^{\mfs}(\varkappa_0)$ to $\sfS_\mfT^\mfs$.  
\end{proof}
One consequence of the arguments given in the proof is that admissible convergent solutions to (\ref{eq:thetaprime})--(\ref{eq:constraint})
correspond to developments that induce data on the singularity.
\begin{lemma}\label{lemma:from adm to dos}
  Assume that $0\leq V\in \mfP_{\a_V}^1$ for some $\a_V\in (0,1)$. Consider a Bianchi class A non-linear scalar field development $\md[V](\mfI)$ as
  obtained in Proposition~\ref{prop:unique max dev}. Assume that there is a $t_0$ in the associated existence interval such that $\theta(t)>0$ for
  $t\leq t_0$ and such that $\theta$ is not integrable to the past. Assume, moreover, that introducing variables as in Section~\ref{ssection:whsuform},
  the corresponding solution to (\ref{eq:thetaprime})--(\ref{eq:constraint}) is an admissible convergent solution. Then
  $\md[V](\mfI)$ induces data on the singularity. 
\end{lemma}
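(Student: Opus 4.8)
The plan is to convert the asymptotics of the admissible convergent solution to the Wainwright--Hsu system back into the geometric limits required by Definition~\ref{def:ind data on sing}, closely following the computation already carried out in the proof of Proposition~\ref{prop:matter dom data on sing} (and its variants in Propositions~\ref{prop:I and II not iso as}--\ref{prop:LRS VIII and IX}). First I would invoke Lemmas~\ref{lemma:BianchiAdevelopment}, \ref{lemma:Bianchi IX remainder} and \ref{lemma:bth large enough} to record that under the hypotheses ($\theta>0$ near the singularity, $\theta\notin L^1$ to the past), the existence interval can be taken to be $(0,t_+)$ with $\theta(t)\rightarrow\infty$ as $t\downarrow 0$; in particular the development has a crushing singularity, which is the prerequisite for even speaking of induced data on the singularity. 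Then I would change to the time coordinate $\tau$ of Lemma~\ref{lemma:tautimecoord} (or Lemma~\ref{lemma:tautimecoord BIX}), so that $\tau\rightarrow-\infty$ corresponds to the singularity and the solution becomes a solution to (\ref{eq:thetaprime})--(\ref{eq:constraint}) which, by assumption, is an admissible convergent solution in the sense of Definition~\ref{def:admconvsol}.

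Next I would use the Fermi--Walker propagated orthonormal frame $\{e_i'\}$ from Lemma~\ref{lemma:FW frame eqns} and the proof of Proposition~\ref{prop:unique max dev}, in which the $e_i'$ are eigenvector fields of the expansion normalised Weingarten map $\mK$ with eigenvalues $\ell_i=\Sigma_i+1/3$, and in which $e_i=a_i^{-1}e_i'$ with $a_i$ solving (\ref{eq:dai dtau}), i.e. $da_i/d\tau=3\ell_i a_i$, $a_i(0)=1$. The key identity, exactly as in the proof of Proposition~\ref{prop:matter dom data on sing}, is
\[
\theta^{\ell_i}a_i = \theta(0)^{\ell_i(\tau)}\exp\left(-\ell_i(\tau)\textstyle{\int}_0^\tau[1+q(s)]\,ds+\textstyle{\int}_0^\tau 3\ell_i(s)\,ds\right).
\]
The content of ``admissible convergent'' — namely (\ref{eq:SpSmphionelim}), the lower bound $\ln\theta(\tau)\geq -3\tau-C_\theta$, and conditions (i)/(ii) — is precisely what is needed to run the argument at the start of the proof of Lemma~\ref{lemma:reg of B}: from (\ref{eq:constraint}), (\ref{eq:qdef}) and the bound on $\theta$ one deduces that $\Omega$, the $N_i$ (or, in the LRS case (ii), $\No$ and $\No\Nt$) and hence $q-2$, $\Sigma_i-\sigma_i$ all decay exponentially as $\tau\rightarrow-\infty$. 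Exponential decay of $q-2$ and of $\ell_i-(\sigma_i+1/3)$ makes the exponent in the displayed formula converge, so $\theta^{\ell_i}a_i$ tends to a strictly positive limit $\a_i$; this produces the Riemannian metric $\msH$ with $\msH(e_i',e_j')=\a_i^2\delta_{ij}$ satisfying (\ref{eq:bAmetriclimit}). Since the $\ell_i$ converge, $\mK\rightarrow\msK:=\mathrm{diag}(\sigma_i+1/3)$; exponential decay of $\phi_1-\Phi_1$ (from (\ref{eq:SpSmphionelim})) together with the equations for $\phi_0,\phi_1$ and the convergence of $\theta^{-1}\phi_t$ and $\phi+\theta^{-1}\phi_t\ln\theta$ gives the scalar-field data $\Phi_1,\Phi_0$. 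One then checks conditions 1--4 of Definition~\ref{def:ndvacidonbbssh}: $\mathrm{tr}\,\msK=1$ and symmetry/diagonality are immediate; $\mathrm{div}_{\msH}\msK=0$ follows since the commutator matrix and $\msK$ are simultaneously diagonal; $\mathrm{tr}\,\msK^2+\Phi_1^2=1$ by taking the limit of (\ref{eq:constraint}) (equivalently (\ref{eq:as Ham con pi Phione})); and condition 3, the subalgebra condition, follows because the $N_i$ either vanish identically or satisfy $f_i(2,\sigma_+,\sigma_-)>0$ in the limit (exactly as argued in Theorem~\ref{thm:dichotomy} and used in Proposition~\ref{prop:matter dom data on sing}), while condition 4 is handled by the case-(ii) normalization exactly as in Proposition~\ref{prop:I and II not iso as}.

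The main obstacle — and the one step I would not try to shortcut — is the exclusion of isotropic and LRS Bianchi type $\mrVIIz$ data on the singularity, required for membership in $\mS$ (Definition~\ref{def:sets of id on singularity}). This is the delicate part of the proof of Proposition~\ref{prop:matter dom data on sing}: one argues by contradiction, uses the existence of the Lie algebra isomorphisms $\Psi_t$ and the transformation law for commutator matrices under orthonormal frame changes (\cite[(19.3)]{RinCauchy}) to align the limiting frame with the development's frame, derives that $\Nt/\Nth\rightarrow 1$ in the direction of the singularity from the relation between the $N_i$ and the $a_i$ together with the fact that $\ell_2-\ell_3\rightarrow 0$ exponentially while $\ln\theta$ grows at most linearly, and then invokes Proposition~\ref{prop:LRS BVIIz as char} to conclude that the solution itself lies in the invariant set $\{\Sm=0,\ \Nt=\Nth\}$, i.e. the original initial data $\mfI$ were isotropic or LRS Bianchi type $\mrVIIz$ — contradicting $\mfI\in\mB[V]$. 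For case (ii) (LRS Bianchi VIII/IX) this exclusion is automatic since those Bianchi types are not $\mrVIIz$, and in the isotropic situations one argues as in the relevant part of the proof of Proposition~\ref{prop:matter dom data on sing} that $\msH$ having constant curvature already forces the frame into the required form. Thus I would organize the proof as: (1) crushing singularity and change to $\tau$; (2) exponential-decay estimates from admissibility; (3) construction of $\msH$, $\msK$, $\Phi_0$, $\Phi_1$ and verification of conditions 1--4 of Definition~\ref{def:ndvacidonbbssh}; (4) the $\mrVIIz$-exclusion argument via Proposition~\ref{prop:LRS BVIIz as char}; concluding that $\md[V](\mfI)$ induces data on the singularity.
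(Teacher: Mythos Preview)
Your proposal is correct and matches the paper's approach. The paper's proof is terser: it first cites the beginning of the proof of Lemma~\ref{lemma:reg of B} to obtain the exponential decay of $q-2$, $\Sigma_\pm-\sigma_\pm$, $\Omega$ and $V'\circ\phi/\theta^2$, observes that this yields the estimates (\ref{seq:SpmphiprNilimmatterdom}) in case (i) and the conclusions of Corollary~\ref{cor:LRS Bianchi type VIII} or \ref{cor:LRS Bianchi type IX} in case (ii), and then simply appeals to ``arguments similar to the proofs of the results of Chapter~\ref{chapter:as to data on sing}'' --- exactly the computations you have spelled out in steps (3) and (4). One minor remark: the crushing singularity ($\theta\to\infty$ as $t\downarrow t_-$) follows directly from the admissibility condition $\ln\theta(\tau)\geq -3\tau-C_\theta$, so the invocation of Lemma~\ref{lemma:bth large enough} (which requires $V\in\mfP_{\ropar}$) is unnecessary.
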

\begin{proof}
  Due to the arguments given at the beginning of the proof of Lemma~\ref{lemma:reg of B}, $q-2$, $\Sp-\sigma_+$ and $\Sm-\sigma_-$ decay exponentially.
  Moreover, $\Omega$
  decays exponentially, and by an argument similar to the proof of this fact, $V'\circ\phi/\theta^2$ decays exponentially. This means that
  $\phi_1'$ and $(\ln\theta+3\tau)'$ decay exponentially. In particular, in case condition (i) of Definition~\ref{def:admconvsol} is satisfied,
  the estimates (\ref{seq:SpmphiprNilimmatterdom}) hold. In case condition (ii) of Definition~\ref{def:admconvsol} is satisfied, it can be verified
  that the conclusions of either Corollary~\ref{cor:LRS Bianchi type VIII} or Corollary~\ref{cor:LRS Bianchi type IX} are satisfied. For these reasons,
  the desired conclusion follows by arguments similar to the proofs of the results of Chapter~\ref{chapter:as to data on sing}.
\end{proof}
Next, it is of interest to make the following observation concerning the regularity of the map from regular data to data on the singularity.
\begin{lemma}\label{lemma:local repr map to sing}
  Let $0\leq V\in \mfP_{\a_V}^\infty$ for some $\a_{V}\in (0,1)$. Fix a Bianchi class A type $\mfT$ and $\mfs\in\{\iso,\roLRS,\roper,\rogen\}$
  satisfying one of the following conditions: $\mfs=\rogen$; $\mfs=\iso$ and $\mfT\in\{\mrI,\mrIX\}$; $\mfs=\roLRS$ and
  $\mfT\in \{\mrI,\mrII,\mrVIII,\mrIX\}$; and $(\mfT,\mfs)=(\mrVIz,\roper)$.
  Let $\varkappa_0\in\ro$ and assume that $e^{\varkappa_0}>[3\inf_sV(s)]^{1/2}$ if $\mfT\neq\mrIX$.
  Let, moreover, 
  $\xi_0\in\sfA_\mfT^\mfs(\varkappa_0)$ and assume that the $q$ corresponding to $\xi_0$ is $>-1$. Then there is an open neighbourhood of $\xi_0$ in
  $\sfA_\mfT^\mfs(\varkappa_0)$, say $U$, so that, considering $\sfF_\sfP(U)$ as an open subset of $\sfB_\mfT^\mfs(e^{\varkappa_0})$, the map
  \begin{equation}\label{eq:repr map from id to dos}
    \sfF_\sfP(U)\ni\eta\mapsto \sfF_\sfQ\circ\Psi_\mfT^{\mfs}\circ\sfF_\sfP^{-1}(\eta)
  \end{equation}
  is a diffeomorphism onto its image that decends to a map from $\sfF_\sfP(U)/\Gamma_\mfT^\mfs$ to $\sfU_\mfT^\mfs$ which is a local
  representation of the restriction of the map from Lemma~\ref{lemma:iso ri iso ids incl type} to the corresponding open subset of
  ${}^{\rosc}\mfB_{\mfT}^{\mfs}[V](e^{\varkappa_0})$. 
\end{lemma}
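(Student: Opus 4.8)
The plan is to assemble the statement from Lemma~\ref{lemma:reg of B} (the analytic core), Lemma~\ref{lemma:mfB mP rel} (the change of variables on the regular-data side), the equivariance identities (\ref{eq:sfF sym comm}) and (\ref{eq:sff sfQ comm}) (the change of variables on both sides), and the geometric identifications of Remarks~\ref{remark:par iso id fixed mc} and \ref{remark:sc mfB mfT mfs param sing} together with Lemma~\ref{lemma:from adm to dos}.

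\textbf{Step 1 (the diffeomorphism).} Since $\xi_0\in\sfA_\mfT^\mfs(\varkappa_0)$ and the $q$ corresponding to $\xi_0$ is $>-1$, the last part of Lemma~\ref{lemma:reg of B} provides an open neighbourhood $U$ of $\xi_0$ in $\sfA_\mfT^\mfs(\varkappa_0)$ such that $\sfF_\sfQ\circ\Psi_\mfT^\mfs$, restricted to $U$, is a diffeomorphism onto its (open) image in $\sfS_\mfT^\mfs$; after shrinking $U$ we may in addition assume that no two distinct points of $U$ are related by an element of $\Lambda_\mfT^\mfs$ (possible since $\Lambda_\mfT^\mfs$ acts freely and properly discontinuously on $\sfP_\mfT^\mfs(\varkappa_0)$, by Lemma~\ref{lemma:sfQ mfT mfs} and Lemma~\ref{lemma:sfRmfTmfsvartheta sm mfd}). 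By Lemma~\ref{lemma:mfB mP rel}, $\sfF_\sfP$ is a diffeomorphism from $\sfP_\mfT$ onto $\sfB_\mfT\cap\{\tr K>0\}$, it intertwines the $\Lambda$-action with the $\Gamma$-action via (\ref{eq:sfF sym comm}), so it carries fixed-point sets of $\Lambda$-maps to those of the corresponding $\Gamma$-maps, and by (\ref{eq:sfF sfP def}) it sends $\{\varkappa=\varkappa_0\}$ onto $\{\tr K=e^{\varkappa_0}\}$. Hence $\sfF_\sfP$ restricts to a diffeomorphism $\sfP_\mfT^\mfs(\varkappa_0)\to\sfB_\mfT^\mfs(e^{\varkappa_0})$, and $\sfF_\sfP(U)$ is an open subset of $\sfB_\mfT^\mfs(e^{\varkappa_0})$. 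The map (\ref{eq:repr map from id to dos}) on $\sfF_\sfP(U)$ is thus the composition of the diffeomorphism $\sfF_\sfP^{-1}:\sfF_\sfP(U)\to U$ with $\sfF_\sfQ\circ\Psi_\mfT^\mfs|_U$, hence is itself a diffeomorphism onto its image.

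\textbf{Step 2 (descent).} I would check that (\ref{eq:repr map from id to dos}) is equivariant for the $\Gamma_\mfT^\mfs$-action on $\sfF_\sfP(U)\subset\sfB_\mfT^\mfs(e^{\varkappa_0})$ and the $\Gamma_\mfT^\mfs$-action on $\sfS_\mfT^\mfs$: $\sfF_\sfP^{-1}$ intertwines $\Gamma_\mfT^\mfs$ with $\Lambda_\mfT^\mfs$ by (\ref{eq:sfF sym comm}); $\Psi_\mfT^\mfs$ commutes with the maps $\xi_\lambda^\pm$, because by Proposition~\ref{prop:variableexandunique} applying $\xi_\lambda^\pm$ to the initial data of (\ref{seq:nui etc eqs}) permutes and sign-flips the solution, hence permutes and sign-flips all the limits defining $\Psi_\mfT^\mfs$ in (\ref{eq:Psi mfT mfs def}); and $\sfF_\sfQ$ intertwines $\Lambda_\mfT^\mfs$ with $\Gamma_\mfT^\mfs$ by (\ref{eq:sff sfQ comm}). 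Since $\Gamma_\mfT^\mfs$ acts freely and properly discontinuously on $\sfB_\mfT^\mfs(e^{\varkappa_0})$ and on $\sfS_\mfT^\mfs$ (Lemmas~\ref{lemma:sfRmfTmfsvartheta sm mfd} and \ref{lemma:sing mfT mfs}), the quotient projections onto $\sfR_\mfT^\mfs(e^{\varkappa_0})$ and onto $\sfU_\mfT^\mfs$ are open local diffeomorphisms; so the image $\sfF_\sfP(U)/\Gamma_\mfT^\mfs$ of $\sfF_\sfP(U)$ in $\sfR_\mfT^\mfs(e^{\varkappa_0})$ is open, (\ref{eq:repr map from id to dos}) descends to a smooth map $\sfF_\sfP(U)/\Gamma_\mfT^\mfs\to\sfU_\mfT^\mfs$, and by the choice of $U$ in Step~1 the projection is injective on $\sfF_\sfP(U)$, so the descended map is again a diffeomorphism onto its (open) image.

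\textbf{Step 3 (identification with the geometric map).} Because $U\subset\sfA_\mfT^\mfs(\varkappa_0)$, every point of $U$ gives rise to an admissible convergent solution, and because $q>-1$ on $U$ with $q\to2$ along it, the associated $\theta$ is positive and monotone near the singularity and non-integrable to the past; hence Lemma~\ref{lemma:from adm to dos} shows the corresponding development induces data on the singularity, so the map of Lemma~\ref{lemma:iso ri iso ids incl type} is defined on the open subset of ${}^{\rosc}\mfB_\mfT^\mfs[V](e^{\varkappa_0})$ that the bijection of Remark~\ref{remark:par iso id fixed mc} identifies with $\sfF_\sfP(U)/\Gamma_\mfT^\mfs$. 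Using that bijection and the one of Remark~\ref{remark:sc mfB mfT mfs param sing}, it remains to verify: if $x\in\sfB_\mfT^\mfs(e^{\varkappa_0})$ arises from $\mfI$ (Lemma~\ref{lemma:x arising from mfI}) and $\md[V](\mfI)$ induces $\mfI_\infty$ on the singularity, then $\sfF_\sfQ\circ\Psi_\mfT^\mfs\circ\sfF_\sfP^{-1}(x)$ arises from $\mfI_\infty$ (Lemma~\ref{lemma:x arises from Iinf}). I would check this by unwinding the construction of $\md[V](\mfI)$ in Proposition~\ref{prop:unique max dev}: writing $x=(\bn,K,\bphi_0,\bphi_1)$ in the frame $\{e_i'\}$ diagonalising $\bn$ and $K$, the point $\sfF_\sfP^{-1}(x)$ is exactly the datum at $\tau=0$ of the solution of (\ref{seq:nui etc eqs}) attached to $\md[V](\mfI)$; the expansion normalised Weingarten map $\mK$ has eigenvalues $\Sigma_i+1/3\to p_i$ in $\{e_i'\}$, one has $\theta^{-1}\phi_t=\phi_1/3\to\Phi_1/3$ and $\phi+\theta^{-1}\phi_t\ln\theta=\psi_0+(\phi_1/3)\varkappa\to\Phi_0+\Phi_1\varkappa_\infty/3$, and the limit metric $\msH$ in (\ref{eq:bAmetriclimit}) is the one making $\be_i:=(\lim_{\tau\to-\infty}\theta^{\ell_i}a_i)^{-1}e_i'$ orthonormal, relative to which the commutator matrix of $\{\be_i\}$ equals $\rodiag(e^{2p_1\varkappa_\infty}\e_1e^{\nu_{1,\infty}},e^{2p_2\varkappa_\infty}\e_2e^{\nu_{2,\infty}},e^{2p_3\varkappa_\infty}\e_3e^{\nu_{3,\infty}})$. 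Matching these with Definition~\ref{def:ind data on sing}, Definition~\ref{def:mfS def} and (\ref{eq:sfFQ def}) yields the asserted equality and hence the commuting diagram expressing (\ref{eq:repr map from id to dos}) as a local representation of the map of Lemma~\ref{lemma:iso ri iso ids incl type}.

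\textbf{Main obstacle.} The computational heart is the last claim in Step~3 — that $\be_i:=(\lim\theta^{\ell_i}a_i)^{-1}e_i'$ is $\msH$-orthonormal and carries the commutator matrix with the stated $e^{2p_i\varkappa_\infty}$-weights. This is precisely the geometric motivation for the weights in $\sfF_\sfQ$, but it requires handling the quantity $\theta^{\ell_i(\tau)}a_i(\tau)$, in which the exponent itself varies, via the explicit integral representation used in the proof of Proposition~\ref{prop:matter dom data on sing} together with the exponential convergence of $\ell_i$, $q$ and $\Omega$ (established exactly as in the proof of Lemma~\ref{lemma:reg of B}). Everything else is bookkeeping with results already in hand; one should also record the routine fact that the parametrising bijection of Remark~\ref{remark:par iso id fixed mc}, being induced by ``$x$ arises from $\mfI$'' and mean-curvature preserving, genuinely matches $\sfF_\sfP(U)/\Gamma_\mfT^\mfs$ with an open subset of ${}^{\rosc}\mfB_\mfT^\mfs[V](e^{\varkappa_0})$.
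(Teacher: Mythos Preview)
Your proposal is correct and follows essentially the same route as the paper: invoke the last part of Lemma~\ref{lemma:reg of B} for the local diffeomorphism, use the equivariance identities (\ref{eq:sfF sym comm}), (\ref{eq:sff sfQ comm}) together with the symmetry of $\Psi_\mfT^\mfs$ for the descent, and then carry out the explicit asymptotic computation (eigenvalues of $\mK$, the limits $\Phi_1/3$ and $\Phi_0+\Phi_1\varkappa_\infty/3$, the rescaled metric, and the commutator matrix $o_k=\e_k e^{\nu_{k,\infty}+2p_k\varkappa_\infty}$) to identify the composite with the geometric map of Lemma~\ref{lemma:iso ri iso ids incl type}. Your ``main obstacle'' is exactly the computation the paper performs, and your extra care about shrinking $U$ for injectivity on the quotient and invoking Lemma~\ref{lemma:from adm to dos} are harmless elaborations that the paper leaves implicit.
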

\begin{remark}
  When we speak of representations of the map from isometry classes of regular data to isometry classes of data on the singularity, we always have
  in mind that there is a natural parametrisation of ${}^{\rosc}\mfB_{\mfT}^{\mfs}[V](\vartheta_0)$ by $\sfR_\mfT^\mfs(\vartheta_0)$, see
  Remark~\ref{remark:par iso id fixed mc}; and that there is a natural parametrisation of ${}^{\rosc}\mfS_\mfT^\mfs$ by
  $\sfU_\mfT^\mfs$, see Lemma~\ref{lemma:sc mfB mfT mfs param sing}. 
\end{remark}
\begin{remark}
  The maps $\sfF_\sfP$ and $\sfF_\sfQ$ are defined in (\ref{eq:sfF sfP def}) and (\ref{eq:sfFQ def}) respectively. 
\end{remark}
\begin{remark}\label{remark:q big m one eq rond}
  In the case of isotropic Bianchi type I, the condition that $q>-1$ corresponds to the condition that the initial datum for the normal
  derivative of the scalar field is non-vanishing. 
\end{remark}
\begin{proof}
  Let $U$ be the open neighbourhood of $\xi_0$ in $\sfA_\mfT^\mfs(\varkappa_0)$ whose existence is guaranteed by Lemma~\ref{lemma:reg of B}. That the
  map (\ref{eq:repr map from id to dos}) is a diffeomorphism onto its image is an immediate consequence of Lemma~\ref{lemma:reg of B}. That it descends
  to a map from $\sfF_\sfP(U)/\Gamma_\mfT^\mfs$ to $\sfU_\mfT^\mfs$ is an immediate consequence of (\ref{eq:sfF sym comm}), (\ref{eq:sff sfQ comm}) and
  (\ref{eq:Psi mfT mfs def}). What remains to be verified is that it is a local representation of the restriction of the map from
  Lemma~\ref{lemma:iso ri iso ids incl type} to the corresponding open subset of ${}^{\rosc}\mfB_{\mfT}^{\mfs}[V](e^{\varkappa_0})$.
  Let $\eta\in \sfF_\sfP(U)\subset \sfB_{\mfT}^{\mfs}(e^{\varkappa_0})$. Then there are $\mfI\in {}^{\rosc}\mB_{\mfT}^{\mfs}[V](e^{\varkappa_0})$ such that $\eta$
  arises from $\mfI$; see Remarks~\ref{remark:eta BTs realised} and \ref{remark:x arising from mfI}. Let $\{e_i\}$ be the corresponding basis of $\mfg$.
  Let $\xi:=\sfF_\sfP^{-1}(\eta)$ and let $x$ be the solution to (\ref{seq:nui etc eqs}) corresponding to the initial data defined by $\xi$ at $\tau=0$ as
  in the statement of Lemma~\ref{lemma:reg of B}. By assumption, the limit on the right hand side of (\ref{eq:x limit dos}) exists, and we
  use the notation introduced on the left hand side of (\ref{eq:x limit dos}). Note also that the development associated with $\mfI$ is such that
  if we associate Wainwright-Hsu variables with this development as described in the proof of Proposition~\ref{prop:unique max dev} and in
  Section~\ref{ssection:whsuform} (using the frame $\{e_i\}$) and if we associate variables $\nu_k$, $\Sigma_k$, $\varkappa$, $\psi_0$ and $\phi_1$ with
  the Wainwright-Hsu variables as described in connection with (\ref{seq:nui etc eqs}), then this also produces the solution to (\ref{seq:nui etc eqs})
  corresponding to the initial data $\xi$ (at $\tau=0$). One particular consequence of the above observations is that the limit of $\mK$, say $\msK$,
  exists, and that $\msK e_i=p_i e_i$ (no summation), where $p_i=\sigma_{i,\infty}+1/3$. Next, the limit of $\theta^{-1}\phi_t$ is the same as the limit of
  $\phi_1/3$. If $\bPhi_1$ denotes the limit of $\theta^{-1}\phi_t$, we conclude that $\bPhi_1=\Phi_1/3$. Next, we turn to the limit of
  $\phi+\theta^{-1}\phi_t\ln\theta$, say $\bPhi_0$. Since
  \[
  \phi+\theta^{-1}\phi_t\ln\theta=\psi_0+\phi_1\varkappa/3\rightarrow \Phi_0+\Phi_1\varkappa_\infty/3,
  \]
  it follows that $\bPhi_0=\Phi_0+\Phi_1\varkappa_\infty/3$. Finally, we need to calculate the limit of the expansion normalised metric. Note, to
  this end, that the coefficients $a_i$ satisfy (\ref{eq:dai dtau}), where $\ell_i:=\Sigma_i+1/3$, and $a_i(0)=1$. This means that
  \[
  \bge(\theta^\mK e_i,\theta^\mK e_j)=a_i^{2}\exp\big(2(\varkappa-3\tau\big)\cdot(\Sigma_i+1/3))\de_{ij}
  \]
  (no summation). This means we have to prove that the following expressions have a limit:
  \begin{equation*}
    \begin{split}
      & 6\textstyle{\int}_{0}^\tau(\Sigma_i+1/3)ds+2\varkappa(\Sigma_i+1/3)-6\tau\Sigma_i-2\tau\\
      = & 6\textstyle{\int}_{0}^\tau(\Sigma_i-\sigma_{i,\infty})ds+2\varkappa(\Sigma_i+1/3)-6\tau(\Sigma_i-\sigma_{i,\infty}).
    \end{split}
  \end{equation*}
  However, since $\Sigma_i-\sigma_{i,\infty}$ and $\varkappa-\varkappa_\infty$ decay exponentially, it is clear that they do. In order to calculate
  the $o$ appearing in Definition~\ref{def:mfS def}, let
  \[
  e_{i}':=\theta^{-\ell_i}a_{i}^{-1}e_i
  \]
  (no summation). If the components of the commutator matrix associated with $a_{i}^{-1}e_i$ are $n_i(\tau)$, then the
  components of the commutator matrix associated with $e_i'$ are
  \[
  \hat{n}_k=\theta^{\ell_k-\ell_i-\ell_j}n_k=\theta^{1+\ell_k-\ell_i-\ell_j}(n_k/\theta)=\theta^{2\ell_k}\e_ke^{\mu_k},
  \]
  assuming $\{i,j,k\}=\{1,2,3\}$. Note that $o_k$ is the limit of $\hat{n}_k$. Compute
  \[
  \hat{n}_k=\e_k\exp\big(2\ell_k\cdot(\varkappa-3\tau)+\nu_k+6\ell_k\tau\big)\rightarrow\e_{k}e^{\nu_{k,\infty}+2p_k\varkappa_\infty}.
  \]
  This means that $o_k=\e_{k}e^{\nu_{k,\infty}+2p_k\varkappa_\infty}$. This means that the data at the singularity are represented by
  \[
  \sfF_\sfQ\circ\Psi_\mfT^\mfs\circ\sfF_\sfP^{-1}(\eta),
  \]
  as desired. 
\end{proof}

\section{Initial data with fixed mean curvature}\label{section:Id fix mean curv}
Let $0\leq V\in C^\infty(\ro)$, $\vartheta_0>\inf_{s\in\ro}[3V(s)]^{1/2}$ and $\vartheta_1>\vartheta_0$. Due to Lemma~\ref{lemma:BianchiAdevelopment}, we
know that if $(\mfT,\mfs)\neq (\mrI,\iso)$, $\mfT\neq\mrIX$ and $\mfI\in\mB_\mfT^\mfs[V](\vartheta_0)$, then the development $\md[V](\mfI)$, the mean
curvature of which is initially (say at $t=t_0$) equal to $\vartheta_0$, is such that there is a unique $t_1<t_0$ in the existence interval with the
property that the mean curvature of the $t=t_1$ hypersurface equals $\vartheta_1$. This defines a map
\begin{equation}\label{eq:PmfTmfs tz to}
  P_{\mfT,\mfs}^{\vartheta_0,\vartheta_1}:\mB_\mfT^\mfs[V](\vartheta_0)\rightarrow \mB_\mfT^\mfs[V](\vartheta_1).
\end{equation}
\index{$\a$Aa@Notation!Maps!PTsvathzvatho@$P_{\mfT,\mfs}^{\vartheta_0,\vartheta_1}$}%
Next, let $\mft\in \{+,\roap,\ropp\}$. Due to the requirements of Definition~\ref{def:plus ap pp nd}, it is possible to formulate a definition similar
to that of $P_{\mfT,\mfs}^{\vartheta_0,\vartheta_1}$ in order to obtain
\begin{equation}\label{eq:PmfTmfsplus tz to}
  P_{\mrIX,\mfs,\mft}^{\vartheta_0,\vartheta_1}:\mB_{\mrIX,\mft}^{\mfs,\rond}[V](\vartheta_0)\rightarrow \mB_{\mrIX,\mft}^{\mfs,\rond}[V](\vartheta_1).
\end{equation}
\index{$\a$Aa@Notation!Maps!PIXstsvathzvatho@$P_{\mrIX,\mfs,\mft}^{\vartheta_0,\vartheta_1}$}%
Next, we analyse the regularity of the corresponding maps between isometry classes of initial data. 
\begin{lemma}\label{lemma:PimfT well def}
  Let $V\in C^\infty(\ro)$ be non-negative. First, fix $(\mfT,\mfs)\neq (\mrI,\iso)$ such that $\mfT\neq\mrIX$ and assume that
  $\vartheta_0>\inf_{s\in\ro}[3V(s)]^{1/2}$ and that $\vartheta_1>\vartheta_0$. Then $P_{\mfT,\mfs}^{\vartheta_0,\vartheta_1}$
  induces a smooth map
  \begin{equation}\label{eq:PimfTmfs tz to}
    \Pi_{\mfT,\mfs}^{\vartheta_0,\vartheta_1}:{}^{\rosc}\mfB_\mfT^\mfs[V](\vartheta_0)\rightarrow {}^{\rosc}\mfB_\mfT^\mfs[V](\vartheta_1)
  \end{equation}
  \index{$\a$Aa@Notation!Maps!PiTsvathzvatho@$\Pi_{\mfT,\mfs}^{\vartheta_0,\vartheta_1}$}%
  which is a diffeomorphism onto its image.

  Second, fix $\mfs\in \{\iso,\roLRS,\rogen\}$ and assume that $\vartheta_0>0$ and that $\vartheta_1>\vartheta_0$. Then
  ${}^{\rosc}\mfB_{\mrIX,+}^{\mfs,\rond}[V](\vartheta_0)$ is an
  open subset of ${}^{\rosc}\mfB_{\mrIX}^{\mfs}[V](\vartheta_0)$ and $P_{\mrIX,\mfs,+}^{\vartheta_0,\vartheta_1}$ induces a smooth map
  \begin{equation}\label{eq:PimfTmfsplus tz to}
    \Pi_{\mrIX,\mfs,+}^{\vartheta_0,\vartheta_1}:{}^{\rosc}\mfB_{\mrIX,+}^{\mfs,\rond}[V](\vartheta_0)\rightarrow {}^{\rosc}\mfB_{\mrIX,+}^{\mfs,\rond}[V](\vartheta_1)
  \end{equation}
  \index{$\a$Aa@Notation!Maps!PiIXspvathzvatho@$\Pi_{\mrIX,\mfs,+}^{\vartheta_0,\vartheta_1}$}%
  which is a diffeomorphism onto its image. Finally, if, in addition to the above assumptions in the case that $\mfT=\mrIX$, $V\in\mfP_{\a_V}^{1}$, where
  $\a_V\in (0,1/3)$, and
  $\mft\in \{\roap,\ropp\}$, then ${}^{\rosc}\mfB_{\mrIX,\mft}^{\mfs,\rond}[V](\vartheta_0)$ is an open subset of ${}^{\rosc}\mfB_{\mrIX}^{\mfs}[V](\vartheta_0)$
  and $P_{\mrIX,\mfs,\mft}^{\vartheta_0,\vartheta_1}$ induces a smooth map
  \begin{equation}\label{eq:PimfTmfsmft tz to}
    \Pi_{\mrIX,\mfs,\mft}^{\vartheta_0,\vartheta_1}:{}^{\rosc}\mfB_{\mrIX,\mft}^{\mfs,\rond}[V](\vartheta_0)\rightarrow {}^{\rosc}\mfB_{\mrIX,\mft}^{\mfs,\rond}[V](\vartheta_1)
  \end{equation}
  \index{$\a$Aa@Notation!Maps!PiIXstvathzvatho@$\Pi_{\mrIX,\mfs,\mft}^{\vartheta_0,\vartheta_1}$}%
  which is a diffeomorphism onto its image.
\end{lemma}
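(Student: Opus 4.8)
The plan is to reduce the whole statement to the level of the symmetry‑reduced equations of Chapter~\ref{section:constructingadevelopment}, where the Einstein flow is a smooth flow on a finite‑dimensional manifold, and then to realise $P_{\mfT,\mfs}^{\vartheta_0,\vartheta_1}$ (and its Bianchi~IX analogues) as a composition of this flow with a smooth ``hitting‑time'' map. Concretely, I would parametrise ${}^{\rosc}\mfB_\mfT^\mfs[V](\vartheta_0)$ by $\sfR_\mfT^\mfs(\vartheta_0)=\sfP_\mfT^\mfs(\ln\vartheta_0)/\Gamma_\mfT^\mfs$ through the diffeomorphism $\sfF_\sfP$ of Lemma~\ref{lemma:mfB mP rel} and Remark~\ref{remark:par iso id fixed mc}, and likewise for $\vartheta_1$, so that it suffices to build an equivariant smooth map between the representing manifolds $\sfP_\mfT^\mfs(\ln\vartheta_0)$ and $\sfP_\mfT^\mfs(\ln\vartheta_1)$ and to check that it descends to the quotients. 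Well‑definedness of $\Pi$ on isometry classes is then immediate from Proposition~\ref{prop:iso id to iso sol}: isometric initial data give isometric developments, the mean curvature $\theta$ is spatially homogeneous so both developments reach $\vartheta_1$ at the same time, and the data induced there are again isometric (as in the proof of Lemma~\ref{lemma:iso reg data iso dos}); equivariance of the lift under the maps $\psi_\sigma^\pm$ holds because these act only on the spatial frame and commute with the frame‑equivariant flow of (\ref{seq:EFEwrtvar})--(\ref{eq:phiddot}). Since $\Gamma_\mfT^\mfs$ acts freely and properly discontinuously (Lemma~\ref{lemma:sfR mfT mfs}), an equivariant smooth diffeomorphism of the total spaces descends to a smooth diffeomorphism of the quotients.

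For $(\mfT,\mfs)\neq(\mrI,\iso)$ with $\mfT\neq\mrIX$, Lemma~\ref{lemma:globalexistenceresceq} gives global existence for solutions to (\ref{eq:thetaprime}) and (\ref{seq:NoSmprime})--(\ref{eq:constraint}), and the second alternative of Lemma~\ref{lemma:BianchiAdevelopment} (the data are necessarily non‑isotropic here) gives $\theta_t<0$ everywhere, $\theta(t)\to\infty$ as $t\downarrow 0$, and $\theta(t_0)>\lim_{t\to\infty}\theta$; hence every value $\vartheta_1>\vartheta_0$ is attained exactly once, at a time $t_1<t_0$, and by the implicit function theorem (using $\theta_t\neq 0$) the map $\mfI\mapsto t_1(\mfI)$ is smooth. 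Composing the smooth flow with evaluation at $t_1(\mfI)$ produces a smooth lift of $P_{\mfT,\mfs}^{\vartheta_0,\vartheta_1}$; reversing the flow (again $\theta_t<0$) shows it is a diffeomorphism; preservation of Bianchi type, symmetry type and non‑triviality under the flow (Proposition~\ref{prop:unique max dev}) shows it really lands in $\mB_\mfT^\mfs[V](\vartheta_1)$. Passing to the $\Gamma_\mfT^\mfs$‑quotients yields $\Pi_{\mfT,\mfs}^{\vartheta_0,\vartheta_1}$, smooth and a diffeomorphism onto its image, injectivity on isometry classes again following by flowing back and invoking Proposition~\ref{prop:iso id to iso sol}.

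For $\mfT=\mrIX$ with $\mft=+$, on the fixed‑mean‑curvature set $\theta(t_0)=\vartheta_0>0$ is automatic, so membership in $\mB_{\mrIX,+}[V]$ is just the existence of some $t_1\in J$ with $X(t_1)>0$; by Lemma~\ref{lemma:BianchiAdevelopment} this already forces $X(t)>0$ and $\theta_t(t)<0$ for all $t\le t_0$ and $\theta(t)\to\infty$ as $t\downarrow 0$, i.e. the non‑degeneracy condition of Definition~\ref{def:plus ap pp nd} is automatic, and the set coincides with $\{\,X(t_1)>0\text{ for some }t_1\in J\,\}$, which is open by continuous dependence of the flow on initial data. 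Since $\theta$ increases monotonically to the past on $(0,t_0]$ to $+\infty$, the hitting time for $\vartheta_1$ is a smooth interior point, the $+$ and non‑degeneracy conditions are inherited under flowing back, and one concludes exactly as in the previous paragraph (now quotienting by $\Gamma_{\mrIX}^{\iso}=\{\roId,\psi_{\roId}^{-}\}$, etc.), with no decay hypothesis on $V$ needed.

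The main obstacle is the case $\mft\in\{\roap,\ropp\}$, where membership involves the behaviour of the development as $t\downarrow 0$ rather than at a finite time, and this is where the stronger hypothesis $0\le V\in\mfP_{\a_V}^1$, $\a_V\in(0,1/3)$, is used. By Corollary~\ref{cor:ap eq pp} one has $\mB_{\mrIX,+}[V]\subset\mB_{\mrIX,\roap}[V]=\mB_{\mrIX,\ropp}[V]$, so it suffices to treat $\roap$. Passing to the Wainwright--Hsu variables (Section~\ref{ssection:whsuform}) and using $\varkappa'=2-q$ and $1/\theta^2=e^{6\tau-2\varkappa}$, the set ${}^{\rosc}\mfB_{\mrIX,\roap}^{\mfs,\rond}[V](\vartheta_0)$ corresponds to $\sfP_{\mrIX,\roap}^{\mfs,\rond}(\ln\vartheta_0)/\Gamma$, cut out by $q(\tau)>-1$ for all $\tau\le 0$ together with $\big(e^{6\tau-2\varkappa}X\big)_-\to 0$ and $e^{6\tau-2\varkappa}V\circ\phi_0\to 0$ as $\tau\to-\infty$. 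The key point is that under these hypotheses the second limit condition is essentially automatic and, more importantly, holds \emph{uniformly} on a neighbourhood of any such solution: in the residual regime $X\le 0$ to the past, the Hamiltonian constraint gives $|\phi'|\le\sqrt6$ and Lemma~\ref{lemma:Bianchi IX remainder} gives $\theta(\tau)\ge ce^{-3\tau/2}$, whence $\Omega$ decays exponentially, and the uniformity over nearby data follows from the same energy/continuous‑dependence estimates as in the proof of Lemma~\ref{lemma:reg of B} (in the regime $X>0$ one uses Lemma~\ref{lemma:X growth} instead, and Theorem~\ref{thm:dichotomy} and Proposition~\ref{prop:dichotomy roap} describe the two resulting matter‑dominated possibilities). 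Consequently the two asymptotic conditions reduce to open conditions, $\sfP_{\mrIX,\roap}^{\mfs,\rond}(\ln\vartheta_0)$ is open in $\sfP_{\mrIX}^{\mfs}(\ln\vartheta_0)$, and the hitting‑time construction applies verbatim ($\theta_t<0$ near the singularity and $\theta\to\infty$ by Lemma~\ref{lemma:Bianchi IX remainder}; the $\roap$/$\ropp$ conditions at $t=0$ are unaffected by flowing to the past, and non‑degeneracy is inherited), giving the smooth diffeomorphism‑onto‑image $\Pi_{\mrIX,\mfs,\mft}^{\vartheta_0,\vartheta_1}$. I expect the delicate bookkeeping to be precisely this step: showing that the ``almost‑positivity'' conditions, which are conditions at the singularity, are open (uniform exponential decay of $\Omega$) and that they, together with non‑degeneracy, are stable under the finite‑time flow used to change the mean curvature.
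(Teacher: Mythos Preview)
Your approach is essentially the paper's: work on the representing manifolds $\sfB_\mfT^\mfs(\vartheta_0)$ (equivalently $\sfP_\mfT^\mfs(\ln\vartheta_0)$), use the smooth flow plus an implicit-function-theorem hitting time based on $\theta_t<0$, and descend equivariantly to the $\Gamma_\mfT^\mfs$-quotients. The non-IX case and the $\mft=+$ case are handled exactly as the paper does.

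In the $\mft\in\{\roap,\ropp\}$ case you have the right skeleton (reduce to $\roap$ via Corollary~\ref{cor:ap eq pp}, split according to whether $X>0$ somewhere to the past or $X\le 0$ always), but two points need correction. First, the claim that ``the Hamiltonian constraint gives $|\phi'|\le\sqrt{6}$'' in the $X\le 0$ regime is false in Bianchi~IX: the polynomial in the $N_i$ can be negative there, so $\Psi$ and hence $\phi_1^2$ are not bounded by the constraint alone. What is true (and what the paper uses) is that the almost-positive conditions together with $X\le 0$ force $\phi_\tau^2\to 6$, and more to the point the last statement of Theorem~\ref{thm:dichotomy} then applies and gives the full asymptotics (\ref{seq:SpmphiprNilimmatterdom}) with $\sigma_\pm=0$ and $\Phi_1^2=6$. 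Second, and relatedly, the clean way to get openness is not to argue uniform decay of $\Omega$ directly but to observe that the conclusions of Theorem~\ref{thm:dichotomy} make the solution an admissible convergent solution in the sense of Definition~\ref{def:admconvsol} (cf.\ Remark~\ref{remark:ex of admiss}); then the openness statement of Lemma~\ref{lemma:reg of B} applies verbatim, the continuity of the asymptotic-data map (Remark~\ref{remark:Psi T s Cm}) gives $\Phi_1\neq 0$ in a neighbourhood, and these two facts together force the almost-positive limits for nearby data. The paper also extracts the non-degeneracy condition $\theta_t<0$ on all of $(-\infty,t_0]$ for nearby data from the uniform convergence $q\to 2$ in the proof of Lemma~\ref{lemma:reg of B} combined with continuous dependence on the compact piece $[t_1,t_0]$. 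Your parenthetical does invoke Theorem~\ref{thm:dichotomy} and Lemma~\ref{lemma:reg of B}, so you are pointing at the right tools; the adjustment is to route the argument through the admissible-convergent-solution framework rather than through an (incorrect) a priori bound on $\phi'$.
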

\begin{proof}
  Consider (\ref{seq:EFEwrtvar})--(\ref{eq:phiddot}). Note that we can consider $\sfB$ introduced in Definition~\ref{def:degenerate version of id}
  to be a set of initial data for these equations. Note also that, given initial data in $\sfB$, the corresponding solutions to
  (\ref{seq:EFEwrtvar})--(\ref{eq:phiddot}) have the property that $n_{ij}=0$ for $i\neq j$ and $\sigma_{ij}=0$ for $i\neq j$; see the
  beginning of the proof of Proposition~\ref{prop:unique max dev}. Recall the notation $\sfB_\mfT^\mfs(\vartheta_0)$ introduced in
  Definition~\ref{def:B varthetaz}. 

  Let $\sfB_{\mrIX,+}^{\mfs,\rond}(\vartheta_0)$ denote the set of $\xi\in\sfB_{\mrIX}^\mfs(\vartheta_0)$ such that the solution to
  (\ref{seq:EFEwrtvar})--(\ref{eq:phiddot}) corresponding to initial data $\xi$ at $t=t_0$ has the following properties: $\theta_t(t)<0$ for all
  $t\leq t_0$; and there is a $t_1\leq t_0$ such that $X(t_1)>0$; see (\ref{eq:Xdef}). If $\xi\in \sfB_{\mrIX,+}^{\mfs,\rond}(\vartheta_0)$, the
  initial data are specified at $t_0$, $t_1\leq t_0$ and $X(t_1)>0$, then, due
  to the fact that solutions depend continuously on initial data, it is clear that there is a neighbourhood of $\xi$ in $\sfB_{\mrIX}^\mfs(\vartheta_0)$
  such that the corresponding solutions have the property that $\theta_t<0$ on $[t_1,t_0]$ and $X(t_1)>0$. Due to the proof of
  Lemma~\ref{lemma:BianchiAdevelopment}), it then follows that $X(t)>0$ and $\theta(t)>0$ for all $t\leq t_1$. Combining this observation with
  (\ref{eq:theta t ito X}), this means that $\theta_t<0$ for all $t\leq t_0$. To summarise: $\sfB_{\mrIX,+}^{\mfs,\rond}(\vartheta_0)$ is an open
  subset of $\sfB_{\mrIX}^\mfs(\vartheta_0)$. 
  
  Next, assume that $V\in\mfP_{\a_V}^{0}$, where $\a_V\in (0,1/3)$. Then
  ${}^{\rosc}\mfB_{\mrIX,\roap}^{\mfs,\rond}[V](\vartheta_0)={}^{\rosc}\mfB_{\mrIX,\ropp}^{\mfs,\rond}[V](\vartheta_0)$; see Corollary~\ref{cor:ap eq pp}. For this
  reason, it is sufficient to assume that $\mft=\roap$. Consider the set of $\xi\in\sfB_{\mrIX}^\mfs(\vartheta_0)$ such that the solution
  to (\ref{seq:EFEwrtvar})--(\ref{eq:phiddot}) corresponding to initial data $\xi$ at $t=t_0$ has the following properties: $\theta_t(t)<0$ for all
  $t\leq t_0$; $V\circ\phi(t)/\theta^2(t)\rightarrow 0$ as $t\downarrow t_-$; and the negative part of $X(t)/\theta^2(t)$ converges to zero as
  $t\downarrow t_-$. Denote this set by $\sfB_{\mrIX,\roap}^{\mfs,\rond}(\vartheta_0)$. Note that there are two possibilities, given
  $\xi\in\sfB_{\mrIX,\roap}^{\mfs,\rond}(\vartheta_0)$: Either the solution
  corresponding to $\xi$ is such that there is a $t_1\leq t_0$ with the property that $X(t_1)>0$ or there is no such $t_1$. If there is such a $t_1$,
  then there is an open neighbourhood of $\xi$ in $\sfB_{\mrIX}^\mfs(\vartheta_0)$ contained in $\sfB_{\mrIX,\roap}^{\mfs,\rond}(\vartheta_0)$; see the argument
  in the case that $\mft=+$ as well as Corollary~\ref{cor:ap eq pp} and its proof. We can therefore assume that $\xi$ is such that the corresponding
  solution has the property that
  $X(t)\leq 0$ for all $t\leq t_0$. This means that the conditions of Lemma~\ref{lemma:BianchiAdevelopment} are not satisfied, but that the conditions
  of Lemma~\ref{lemma:Bianchi IX remainder} are. Thus the last statement of Theorem~\ref{thm:dichotomy} applies. As noted in
  Remark~\ref{remark:ex of admiss}, this means that the corresponding solution to (\ref{eq:thetaprime})--(\ref{eq:constraint}) is an admissible
  convergent solution. Since the set of admissible solutions is open, see Lemma~\ref{lemma:reg of B} (note that the proof of openness only relies on
  $V$ belonging to $\mfP_{\a_V}^{1}$), there is an open neighbourhood of $\xi$ in $\sfB_{\mrIX}^\mfs(\vartheta_0)$ such that the corresponding solutions
  are admissible convergent solutions. Next, due to Lemma~\ref{lemma:reg of B} and Remark~\ref{remark:Psi T s Cm}, the map from initial data to
  asymptotic data is continuous. There is thus a neighbourhood of $\xi$ such that the corresponding solutions have $\Phi_1\neq 0$. Combining this
  observation with the definition of admissible convergent solutions, it follows that there is an open neighbourhood of $\xi$ such that the
  corresponding solutions have the property
  that $V\circ\phi/\theta^2$ and the negative part of $X/\theta^2$ converge to zero. What remains is to prove that the corresponding solutions also have
  the property that $\theta_t<0$ for all $t\leq t_0$. However, due to the proof of Lemma~\ref{lemma:reg of B}, it is clear that $q$ converges uniformly
  to $2$ (for initial data in a suitably small neighbourhood of $\xi$). From this observation and the continuity of the flow, it can be deduced that
  $\theta_t<0$, as desired. This finishes the proof of openness in the case that $\mft\in\{\roap,\ropp\}$.
  
  Due to the observations made below (\ref{eq:ham cons fixed theta}), we know that $\sfB_\mfT^\mfs(\vartheta_0)$ is a smooth submanifold of $\rn{8}$.
  Note also that an element $\xi\in \sfB_\mfT^\mfs(\vartheta_0)$ can be considered to be initial data for (\ref{seq:EFEwrtvar})--(\ref{eq:phiddot}),
  with the following caveat: the Weingarten matrix $K$ appearing in $\xi$ is divided into its trace $\theta$ and trace free part $\sigma$ in order
  to obtain initial data for (\ref{seq:EFEwrtvar})--(\ref{eq:phiddot}). Since this distinction corresponds to a trivial algebraic decomposition,
  we, in what follows, abuse notation and say that $\xi$ are initial data for (\ref{seq:EFEwrtvar})--(\ref{eq:phiddot}). Next, assume that
  $(\mfT,\mfs)\notin \{(\mrI,\iso),(\mrVIIz,\iso),(\mrVIIz,\roLRS)\}$ and $\mfT\neq\mrIX$. Then, due to Lemma~\ref{lemma:BianchiAdevelopment}, $\theta_t<0$
  for $t\leq t_0$ (assuming the initial data have been specified at $t_0$) and $\theta$ diverges to $\infty$ to the past. There is thus a map from
  $\sfB^\mfs_\mfT(\vartheta_0)$ to $\sfB^\mfs_\mfT(\vartheta_1)$,  defined as follows: given $\xi\in\sfB^\mfs_\mfT(\vartheta_0)$, let $x$ be the 
  solution to (\ref{seq:EFEwrtvar})--(\ref{eq:phiddot}) with $x(t_0)=\xi$. Then there is a unique $t_1<t_0$ such that the mean curvature of the solution
  equals $\vartheta_1$ at $t=t_1$, so that $x(t_1)\in\sfB^\mfs_\mfT(\vartheta_1)$ (note that the flow associated with (\ref{seq:EFEwrtvar})--(\ref{eq:phiddot})
  respects the symmetry type; see the proof of Proposition~\ref{prop:unique max dev}). This yields a map
  $\Xi:\sfB^\mfs_\mfT(\vartheta_0)\rightarrow\sfB^\mfs_\mfT(\vartheta_1)$ (which is a representation of the map introduced in (\ref{eq:PmfTmfs tz to}),
  keeping Remark~\ref{remark:eta BTs realised} in mind). Note that this map is injective due to the uniqueness of solutions to
  (\ref{seq:EFEwrtvar})--(\ref{eq:phiddot}) and the fact that $\theta_t(t)<0$ for all
  $t\leq t_0$.  Next, we prove that this map is a diffeomorphism onto its image. Let, to this end,
  $\xi_0 \in\sfB^\mfs_\mfT(\vartheta_0)$ and let $t_a<t_0$ be such that the solution $x$ to (\ref{seq:EFEwrtvar})--(\ref{eq:phiddot}) with $x(t_0)=\xi_0$
  has the property that the mean curvature at $t_a$ is $\vartheta_1$. There is an open neighbourhood $U\subset \sfB^\mfs_\mfT(\vartheta_0)$ of $\xi_0$ and
  an open interval $I$ containing $t_0$ and $t_a$ such that the flow associated with the equations (\ref{seq:EFEwrtvar})--(\ref{eq:phiddot}) is well defined
  and smooth on $I\times U$. Let $\theta(t;\xi)$ denote the mean curvature of the solution with initial data $\xi\in\sfB^\mfs_\mfT(\vartheta_0)$ at $t=t_0$,
  evaluated at $t$. Then $\theta(t_0;\xi)=\vartheta_0$ and $\theta(t_a;\xi_0)=\vartheta_1$. We wish to find the time $t$ such that $\theta(t;\xi)=\vartheta_1$.
  By the implicit function theorem, it is, locally, sufficient that the derivative of $\theta$ with respect to $t$ is non-zero. However, we
  know that $\theta_t(t;\xi)<0$. This means that there is an open neighbourhood $V$ of $\xi_0$ in
  $\sfB^\mfs_\mfT(\vartheta_0)$ and a smooth function $T:V\rightarrow\ro$ with $T(\xi_0)=t_a$ such that $\theta(T(\xi);\xi)=\vartheta_1$. This means
  that, locally, $\Xi$ is represented by $\Xi(\xi)=x(T(\xi);\xi)$, so that $\Xi$ is smooth. The proof that the local inverse is smooth is
  similar. To conclude, $\Xi$ is a diffeomorphism onto its image. Finally, due to the symmetries of the
  equations, it is clear that $\Xi$ descends to a map from $\sfR_\mfT^\mfs(\vartheta_0)$ to $\sfR_\mfT^\mfs(\vartheta_1)$, which is also a diffeomorphism
  onto its image; cf. Lemma~\ref{lemma:sfR mfT mfs}, Definition~\ref{def:B varthetaz} and Lemma~\ref{lemma:sfRmfTmfsvartheta sm mfd}. In order to see
  that this map is a representation of (\ref{eq:PimfTmfs tz to}), it is sufficient to appeal to Lemma~\ref{lemma:sc mfB mfT mfs param} and its proof,
  Remark~\ref{remark:eta BTs realised} and Lemma~\ref{lemma:isometryinducesequalnuK}. The arguments in the remaining two cases are essentially
  identical. 
\end{proof}
In the case of isotropic Bianchi type I, we need to impose additional conditions on the potential. Assume that $V\in\mfP_{\ropar}$ and that
$\vartheta_-:=[3v_{\max}(V)]^{1/2}$. Fix $\vartheta_-<\vartheta_0<\vartheta_1$. Let $\mfI\in {}^{\rosc}\mfB_{\mrI}^{\iso}[V]$ with initial mean curvature
$\vartheta_0$. Due to Lemma~\ref{lemma:bth large enough}, the range of the mean curvature in $\md[V](\mfI)$ includes $(\vartheta_-,\infty)$.
Moreover, due to the same lemma, if the initial data induced by the development at $t_0$ is $\mfI$, there is a unique $t_1<t_0$ such that
the initial data induced at the $t=t_1$ hypersurface has mean curvature $\vartheta_1$. This naturally leads to a map as in (\ref{eq:PmfTmfs tz to}).
Moreover, this map induces a map
\[
{}^{\ropre}\Pi_{\mrI,\iso}^{\vartheta_0,\vartheta_1}:{}^{\rosc}\mfB_{\mrI}^{\iso}[V](\vartheta_0)\rightarrow {}^{\rosc}\mfB_{\mrI}^{\iso}[V](\vartheta_1)
\]
which is a bijection. Let, for $\vartheta>\vartheta_-$, ${}^{\rosc}\mfB_{\mrI,\rond}^{\iso}[V](\vartheta)$ be the subset of 
${}^{\rosc}\mfB_{\mrI}^{\iso}[V](\vartheta)$ with $\bphi_1\neq 0$. The main advantage of this set is that if the isometry class of $\mfI$ belongs to
it, then the initial value of $\theta_t$ is strictly negative; see (\ref{eq:thetad}). Let
\[
U_{\mrI,\iso}^{\vartheta_0,\vartheta_1}:=({}^{\ropre}\Pi_{\mrI,\iso}^{\vartheta_0,\vartheta_1})^{-1}\big({}^{\rosc}\mfB_{\mrI,\rond}^{\iso}[V](\vartheta_1)\big)
\cap {}^{\rosc}\mfB_{\mrI,\rond}^{\iso}[V](\vartheta_0).
\]
Then, by arguments similar to the proof of Lemma~\ref{lemma:PimfT well def}, $U_{\mrI,\iso}^{\vartheta_0,\vartheta_1}$ is an open set and
\[
\Pi_{\mrI,\iso}^{\vartheta_0,\vartheta_1}:={}^{\ropre}\Pi_{\mrI,\iso}^{\vartheta_0,\vartheta_1}|_{U_{\mrI,\iso}^{\vartheta_0,\vartheta_1}}
\]
defines a smooth map
\begin{equation}\label{eq:Pi mrI iso vto vtt}
  \Pi_{\mrI,\iso}^{\vartheta_0,\vartheta_1}:U_{\mrI,\iso}^{\vartheta_0,\vartheta_1}\rightarrow {}^{\rosc}\mfB_{\mrI,\rond}^{\iso}[V](\vartheta_1)
\end{equation}
which is a diffeomorphism onto its image. Next, we prove Lemma~\ref{lemma:dev smo str nisoI nIX}.
\begin{proof}[Proof of Lemma~\ref{lemma:dev smo str nisoI nIX}]
  That the map $\iota_\vartheta$ is well defined is an immediate consequence of Proposition~\ref{prop:iso id to iso sol}. In order to prove that it is
  injective, assume that $\mfI_\a\in {}^{\rosc}\mB_\mfT^\mfs[V](\vartheta)$, $\a\in\{0,1\}$, and that $\md[V](\mfI_0)$ and $\md[V](\mfI_1)$ are isometric.
  Due to Lemma~\ref{lemma:BianchiAdevelopment}, we know that the existence interval of both developments is $(0,\infty)$. Moreover, the time coordinate
  $t$ of a point in one of the developments can invariantly be characterised as the maximal length of a past directed timelike curve. Since
  this characterisation is preserved by isometries, it is clear that the constant-$t$ hypersurfaces are preserved by the isometry. This means
  that the initial data induced on the constant-$t$ hypersurfaces of $\md[V](\mfI_0)$ and $\md[V](\mfI_1)$ are isometric. In particular, the
  mean curvature of $G_0\times\{t\}$ in $\md[V](\mfI_0)$ equals the mean curvature of $G_1\times\{t\}$ in $\md[V](\mfI_1)$. It is therefore
  meaningful to speak of $\theta(t)$, the mean curvature at the constant-$t$ hypersurface, irrespective of the development. Since
  $\theta_t<0$, there is a unique $t_0$ such that $\theta(t_0)=\vartheta$ (the existence is guaranteed by construction). Since the isometry between
  the developments preserves the hypersurfaces with $t=t_0$, it induces an isometry from $\mfI_0$ to $\mfI_1$. This proves injectivity of
  $\iota_\vartheta$. By the definition of ${}^{\rosc}\mfD_\mfT^\mfs[V]$, it is also clear that it is the union of the images of the $\iota_\vartheta$. 

  Next, we need to define a smooth structure on ${}^{\rosc}\mfD_\mfT^\mfs[V]$. Let $\vartheta_-:=\inf_{s\in\ro}[3V(s)]^{1/2}$ and define
  \begin{equation}\label{eq:Sol def}
    {}^{\rosc}\mathfrak{Sol}_\mfT^\mfs[V]:=\bigsqcup_{\vartheta>\vartheta_-}{}^{\rosc}\mfB_\mfT^\mfs[V](\vartheta)/\sim,
  \end{equation}
  where $x_1\sim x_2$ for $x_i\in \mfB_\mfT^\mfs[V](\vartheta_i)$ if $x_2=\Pi_{\mfT,\mfs}^{\vartheta_1,\vartheta_2}(x_1)$, in case $\vartheta_2>\vartheta_1$;
  $x_1=\Pi_{\mfT,\mfs}^{\vartheta_2,\vartheta_1}(x_2)$, in case $\vartheta_1>\vartheta_2$; and $x_1=x_2$, in case $\vartheta_1=\vartheta_2$. In what follows,
  we denote the map from the disjoint union
  to the quotient by $\pi$. In order to prove that there is a bijection from ${}^{\rosc}\mfD_\mfT^\mfs[V]$ to ${}^{\rosc}\mathfrak{Sol}_\mfT^\mfs[V]$,
  note that by the above observations concerning $\iota_\vartheta$, there is a surjective map
  \[
  \iota:\bigsqcup_{\vartheta>\vartheta_-}{}^{\rosc}\mfB_\mfT^\mfs[V](\vartheta)\rightarrow {}^{\rosc}\mfD_\mfT^\mfs[V]
  \]
  defined by taking $x\in {}^{\rosc}\mfB_\mfT^\mfs[V](\vartheta)$ to $\iota_\vartheta(x)$. Assume now that $\iota(x_0)=\iota(x_1)$ for
  $x_\a\in {}^{\rosc}\mfB_\mfT^\mfs[V](\vartheta_\a)$, where $\vartheta_-<\vartheta_0\leq \vartheta_1$. If $\vartheta_0=\vartheta_1$, we know
  that $x_0=x_1$, since $\iota_\vartheta$ is injective. We can therefore assume that $\vartheta_0<\vartheta_1$. Let $\mfI_\a$, $\a\in\{0,1\}$,
  be such that $\iota(x_\a)$ is the isometry class of $\md[V](\mfI_\a)$. Then, by assumption, $\md[V](\mfI_0)$ and $\md[V](\mfI_1)$ are
  isometric. By an argument similar to the proof of the injectivity of $\iota_\vartheta$, it is clear that there is a $t_1>0$ such that
  the initial data induced on $G_\a\times\{t_1\}$ by $\md[V](\mfI_\a)$ are isometric; the mean curvature of the $t=t_1$ hypersurface is
  $\vartheta_1$; and the isometry class of the initial data induced on $G_1\times\{t_1\}$ by $\md[V](\mfI_1)$ is $x_1$. Moreover, there is
  a $t_0>t_1$ such that the mean curvature of the $t=t_0$ hypersurface in $\md[V](\mfI_0)$ is $\vartheta_0$ and the isometry class of the
  initial data induced on the $t=t_0$ hypersurface in $\md[V](\mfI_0)$ is $x_0$. This means that $x_1=\Pi_{\mfT,\mfs}^{\vartheta_0,\vartheta_1}(x_0)$.
  This proves that $\iota$ induces a bijection, say $\biota:{}^{\rosc}\mathfrak{Sol}_\mfT^\mfs[V]\rightarrow {}^{\rosc}\mfD_\mfT^\mfs[V]$.

  To define a topology on ${}^{\rosc}\mathfrak{Sol}_\mfT^\mfs[V]$, we use the fact that ${}^{\rosc}\mfB_\mfT^\mfs[V](\vartheta)$ are smooth manifolds;
  see Lemma~\ref{lemma:sfRmfTmfsvartheta sm mfd} and Remark~\ref{remark:par iso id fixed mc}. This defines a topology on the disjoint union
  of the ${}^{\rosc}\mfB_\mfT^\mfs[V](\vartheta)$ for $\vartheta>\vartheta_-$. We endow ${}^{\rosc}\mathfrak{Sol}_\mfT^\mfs[V]$ with the quotient
  space topology. Assume now that $x_\a\in {}^{\rosc}\mathfrak{Sol}_\mfT^\mfs[V]$, $\a\in\{0,1\}$, are such that $x_0\neq x_1$. We can, without
  loss of generality, assume that $x_\a$ is the projection of $y_\a\in {}^{\rosc}\mfB_\mfT^\mfs[V](\vartheta_\a)$, where
  $\vartheta_-<\vartheta_0\leq \vartheta_1$.
  If $\vartheta_0=\vartheta_1$, then we can choose open neighbourhoods $U_\a$ of $y_\a$ in ${}^{\rosc}\mfB_\mfT^\mfs[V](\vartheta_1)$ which are
  disjoint. If $\vartheta_0<\vartheta_1$, we can choose open neighbourhoods $U_\a$ of $\Pi_{\mfT,\mfs}^{\vartheta_0,\vartheta_1}(y_0)$ and
  $y_1$ in ${}^{\rosc}\mfB_\mfT^\mfs[V](\vartheta_1)$ which are disjoint. Let $V_\a:=\pi(U_\a)$. We need to prove that the $V_\a$ are open and disjoint.
  Note, to this end, that
  \begin{equation}\label{eq:pi inv Va}
    \pi^{-1}(V_\a)=\bigsqcup_{\vartheta_-<\vartheta<\vartheta_1}(\Pi_{\mfT,\mfs}^{\vartheta,\vartheta_1})^{-1}(U_\a)\bigsqcup U_\a
    \bigsqcup_{\vartheta_1<\vartheta}\Pi_{\mfT,\mfs}^{\vartheta_1,\vartheta}(U_\a),
  \end{equation}
  which is clearly an open set. This means that $V_\a$ is open. From this formula, it is also clear that since the $U_\a$ are disjoint, the $V_\a$
  are disjoint. This proves that ${}^{\rosc}\mathfrak{Sol}_\mfT^\mfs[V]$ is Hausdorff. Next, let $Q$ be the set of rational $\vartheta>\vartheta_-$ and
  for for each $\vartheta\in Q$, let $B_\vartheta$ be a countable basis for the topology of ${}^{\rosc}\mfB_\mfT^\mfs[V](\vartheta)$. Let $B$ be the set
  consisting of the projections of the sets in $B_\vartheta$ for $\vartheta\in Q$ to ${}^{\rosc}\mathfrak{Sol}_\mfT^\mfs[V]$. We leave it as an exercise
  to the reader to verify that $B$ is a basis for the topology.

  Next, let $\pi_\vartheta$ denote the restriction of $\pi$ to ${{}^{\rosc}\mfB_\mfT^\mfs[V](\vartheta)}$. In order to prove that this map is a homeomorphism
  onto its image, note that it can be written as a composition of the inverse of $\biota$ with $\iota_\vartheta$. Since $\biota$ is a bijection
  and $\iota_\vartheta$ is injective, it is clear that $\pi_\vartheta$ is a bijection onto its image. Due to (\ref{eq:pi inv Va}), it is clear that
  $\pi_\vartheta$ maps open sets to open sets. That the inverse image of an open set under $\pi_\vartheta$ is an open set is an immediate consequence of
  the definition of the quotient space topology. Thus $\pi_\vartheta$ is a homeomophism onto its image, and the image is an open set. 
  
  Next, we need to define a smooth structure. Fix $\vartheta<\vartheta_0\leq\vartheta_1$ and local charts $(U_\a,\varphi_\a)$ in
  ${}^{\rosc}\mfB_\mfT^\mfs[V](\vartheta_\a)$, $\a\in\{0,1\}$; here $\varphi_\a:U_\a\rightarrow\rn{n}$.
  Let $V_\a=\pi_{\vartheta_\a}(U_\a)$ and define
  \begin{equation}\label{eq:phia def}
    \phi_\a:=\varphi_\a\circ(\pi_{\vartheta_\a})^{-1}|_{V_\a}.
  \end{equation}
  Note that $\phi_\a$ is a homeomorphism from $V_\a$ to $\varphi_\a(U_\a)$. Note also that the domains of maps of this form cover
  ${}^{\rosc}\mathfrak{Sol}_\mfT^\mfs[V]$. Next, note that 
  \[
  \phi_1\circ\phi_0^{-1}=\varphi_1\circ\Pi_{\mfT,\mfs}^{\vartheta_0,\vartheta_1}\circ\varphi_0^{-1}.
  \]
  This means that $\phi_1\circ\phi_{0}^{-1}$ and $\phi_0\circ\phi_1^{-1}$ are smooth. Thus coordinate charts of the form $(V_\a,\phi_\a)$ define a
  smooth atlas. This atlas can be extended to a maximal atlas, endowing ${}^{\rosc}\mathfrak{Sol}_\mfT^\mfs[V]$ with a smooth structure. Combining
  this smooth structure with the bijection $\biota$ endows ${}^{\rosc}\mfD_\mfT^\mfs[V]$ with a (topology and) smooth structure. Finally, we wish to
  prove that the image of $\iota_\vartheta$ is open and that $\iota_\vartheta$ is a diffeomorphism onto its image (we have already noted that the union
  of the images of the $\iota_\vartheta$ is all of ${}^{\rosc}\mfD_\mfT^\mfs[V]$). Since $\iota_\vartheta=\biota\circ\pi_\vartheta$, it is clear that the
  range of $\iota_\vartheta$ is an open set and that $\iota_\vartheta$ is a homeomorphism onto its image. Next, let
  $p\in {}^{\rosc}\mfB_\mfT^\mfs[V](\vartheta)$. Let $(U,\varphi)$ be local coordinates on ${}^{\rosc}\mfB_\mfT^\mfs[V](\vartheta)$ with $p\in U$.
  Define $W:=\pi_{\vartheta}(U)$ and $\phi:W\rightarrow\rn{n}$ in analogy with (\ref{eq:phia def}). Let $\psi$ be the composition of $\phi$ with
  the inverse of $\biota$. Then $(\iota_\vartheta(U),\psi)$ are local coordinates on ${}^{\rosc}\mfD_\mfT^\mfs[V]$. Calculate
  \[
  \psi\circ\iota_\vartheta\circ\varphi^{-1}=\varphi\circ\pi_\vartheta^{-1}\circ(\biota)^{-1}\circ\iota_\vartheta\circ\varphi^{-1}
  =\varphi\circ(\biota\circ\pi_\vartheta)^{-1}\circ\iota_\vartheta\circ\varphi^{-1}=\roId. 
  \]
  This proves that $\iota_\vartheta$ is a local diffeomorphism. The lemma follows. 
\end{proof}
Next, we prove Lemmas~\ref{lemma:dev smo str IX} and \ref{lemma:dev smo str isoI}. 
\begin{proof}[Proof of Lemma~\ref{lemma:dev smo str IX}]
  The proof is similar to the proof of Lemma~\ref{lemma:dev smo str nisoI nIX}. The details are left to the reader. 
\end{proof}
\begin{proof}[Proof of Lemma~\ref{lemma:dev smo str isoI}]
  Again, the proof is similar to the proof of Lemma~\ref{lemma:dev smo str nisoI nIX}. However, there are some differences. First, in the
  proof of Lemma~\ref{lemma:dev smo str nisoI nIX}, we use the fact that the existence interval of all developments is $(0,\infty)$. In the
  present setting, it could be $\ro$. However, that the map $\iota_\vartheta$ is well defined is an immediate consequence of
  Proposition~\ref{prop:iso id to iso sol} and Proposition~\ref{lemma:bth large enough}. In order to prove that it is injective, assume that
  $\mfI_\a\in {}^{\rosc}\mB_{\mrI,\rond}^{\iso}[V](\vartheta)$, $\a\in\{0,1\}$, where $\vartheta\in I_V$, and that $\md[V](\mfI_0)$ and $\md[V](\mfI_1)$
  are isometric. Denote the isometry by $\psi$. Let $G_{\a,t}:=G_\a\times\{t\}$ denote the
  constant-$t$ hypersurfaces in $\md[V](\mfI_\a)=(M_\a,g_\a,\phi_\a)$, where $M_\a=G_\a\times J_\a$. Due to the proof of
  Lemma~\ref{lemma:bth large enough}, we know that the constant-$t$ hypersurfaces on which the time derivative of the scalar field are zero are
  isolated. Fix a $t_0\in J_0$ such that $(\d_t\phi_0)(t_0)\neq 0$. Let $p_0\in G_0$. Note that $\d_t|_{(p_0,t_0)}$ can be characterised as the
  unique future directed unit vector parallel with the spacetime gradient of $\phi_0$ at $(p_0,t_0)$. Since all these properties are preserved by
  $\psi$, it follows that $\psi_*\d_t|_{(p_0,t_0)}=\d_t|_{\psi(p_0,t_0)}$. Since this holds on an open and dense subset of $M_0$, it is clear that
  $\psi$ maps $\d_t$ to $\d_t$. Since $\psi$ is an isometry, curves in $G_{0,t}$ are mapped to curves perpendicular to $\d_t$. This means that
  their image is contained in a constant-$t$ hypersurface. Due to the connectedness of $G_0$, this means that for each $t$, $G_{0,t}$ is mapped
  into a constant-$t$ hypersurface. Since the same argument can be applied to $\psi^{-1}$, it follows that $\psi$ preserves the constant-$t$
  hypersurfaces. Since $\psi$ is an isometry and since the mean curvatures are strictly decreasing, see Lemma~\ref{lemma:bth large enough}, it
  also follows that $\psi$ maps constant-$t$ hypersurfaces of a mean curvature belonging to $I_V$ to a constant-$t$ hypersurface with the same
  mean curvature. This applies, in particular, to the case that the mean curvature is $\vartheta$, so that $\psi$ induces an isometry between
  $\mfI_0$ and $\mfI_1$. This proves injectivity. That the union of the images of the $\iota_\vartheta$ is 
  ${}^{\rosc}\mfD_{\mrI,\roc}^{\iso}[V]$ is an immediate consequence of the definition and Lemma~\ref{lemma:bth large enough}.

  In order to define the smooth structure, define
  \[
    {}^{\rosc}\mathfrak{Sol}_{\mrI}^{\iso}[V]:=\bigsqcup_{\vartheta>\vartheta_-}{}^{\rosc}\mfB_{\mrI,\rond}^{\iso}[V](\vartheta)/\sim,
  \]    
  where $\vartheta_-:=[3v_{\max}(V)]^{1/2}$. The equivalence relation $\sim$ is defined in analogy with the equivalence relation introduced in
  connection with (\ref{eq:Sol def}). The only difference is that we use the maps introduced in (\ref{eq:Pi mrI iso vto vtt}) instead of
  the ones given in (\ref{eq:PimfTmfs tz to}). We leave it to the reader to verify that the remaining steps of the proof of
  Lemma~\ref{lemma:dev smo str nisoI nIX} can be modified to apply in the present setting as well. The lemma follows. 
\end{proof}
Under the assumptions of Lemma~\ref{lemma:parametr by single const th surf}, the set of isometry classes of initial data with one fixed mean
curvature parametrises the set of isometry classes of developments. The situation in case $(\mfT,\mfs)=(\mrI,\iso)$ is somewhat more complicated.
Nevertheless, Lemma~\ref{lemma:iso BI dev homeom one fix CMC} holds.  
\begin{proof}[Proof of Lemma~\ref{lemma:iso BI dev homeom one fix CMC}]
  Let $x\in {}^{\rosc}\mfB_{\mrI}^{\iso}[V](\vartheta)$. Due to Lemma~\ref{lemma:bth large enough}, developments corresponding to initial data in $x$ have
  crushing singularities. Combining this observation with Proposition~\ref{prop:iso id to iso sol}, it follows that $x$ gives rise to a unique element
  of ${}^{\rosc}\mfD_{\mrI,\roc}^{\iso}[V]$. That the corresponding map, say $\varphi$, is injective follows by an argument similar to the beginning of the
  proof of Lemma~\ref{lemma:dev smo str isoI}. Next, given an isotropic and simply connected Bianchi type I development with a crushing singularity,
  it has to induce an element in ${}^{\rosc}\mfB_{\mrI}^{\iso}[V](\vartheta)$ due to Lemma~\ref{lemma:bth large enough}. Thus $\varphi$ is a bijection.
  In order to prove that $\varphi$ is a homeomorphism, let $x$ be as above and $y=\md[V](\mfI)=(M,g,\phi)$, where $\mfI\in x$ and $M=G\times J$.
  Assume $t_0\in J$ to be such that the initial data induced by $y$ on $G_{t_0}$ are $\mfI$. Let $t_1<t_0$ with
  $t_1\in J$ be such that the isometry class of initial data induced by $y$ at $t_1$ belongs to ${}^{\rosc}\mfB_{\mrI,\rond}^{\iso}[V](\vartheta_1)$ for
  some $\vartheta_1>\vartheta$. That there is such a $t_1$ follows from Lemma~\ref{lemma:bth large enough}.
  By an argument similar to the proof of Lemma~\ref{lemma:PimfT well def}, there is then an open neighbourhood, say
  $U$, of $x$ such that the time at which the isometry class of solutions, corresponding to initial data $\xi\in U$ at $t_0$, induces data at
  ${}^{\rosc}\mfB_{\mrI,\rond}^{\iso}[V](\vartheta_1)$, say $T(\xi)$, depends smoothly on $\xi$. If $y(\cdot;\eta)$ is a development corresponding to
  initial data $\eta\in\xi$ at $t_0$, it is then clear that the map taking $\xi\in U$ to the isometry class of initial data induced by  $y(\cdot;\eta)$
  at $T(\xi)$ is smooth. For this reason, the map $\varphi$ is smooth. The continuity of the inverse can be demonstrated by the strict monotonicity
  of $\theta$ along a solution curve and the smoothness of the flow associated with the equations. We leave the details to the reader. 
\end{proof}

\chapter{Geometric results in the direction of the singularity}\label{chapter:geo res dir of sing}

\section{Geometric data on the singularity}\label{ssection:dataonsingularity}
Our next goal is to prove Theorem~\ref{thm:dataonsingtosolution}. However, it is convenient to first characterise the symmetry conditions
in Definition~\ref{def:isotropic} in terms of a basis for the Lie algebra $\mfg$. Let $(G,\msH,\msK,\Phi_{0},\Phi_{1})\in \mS$; see
Definition~\ref{def:sets of id on singularity}. Let $\{e_{i}\}$ and $\{e_{i}'\}$ be two bases of $\mfg$ satisfying
\begin{equation}\label{eq:canonicalbasisBianchiA}
  \msH(e_{i},e_{j})=\delta_{ij},\ \ \
  \msK e_{i}=p_{i}e_{i},\ \ \
  [e_{i},e_{j}]=\e_{ijk}o_{k}e_{k}
\end{equation}
(no summation on $i$ in the second equality and no summation on $k$ in the third equality)     
and (\ref{eq:canonicalbasisBianchiA}) with $e_i$, $p_i$ and $o_i$ replaced by $e_i'$, $p_i'$ and $o_i'$ respectively; cf. Lemma~\ref{lemma:x arises from Iinf}. Denote the
corresponding commutator matrices (see Definition~\ref{def:comm and Wein matr}) by $o=\mathrm{diag}(o_1,o_2,o_3)$ and $o'=\mathrm{diag}(o_1',o_2',o_3')$ respectively.
Then there is an orthogonal matrix $A$ such that $e_i=A_i^{\phantom{i}j}e_j'$. Moreover, due to \cite[Lemma~19.6, p.~207]{RinCauchy}, 
\begin{equation}\label{eq:o to o prime}
  o'=(\det A)^{-1}A^t oA.
\end{equation}
In addition, $K'=A^tKA$, where $K=\mathrm{diag}(p_1,p_2,p_3)$ and $K'=\mathrm{diag}(p_1',p_2',p_3')$. 

\begin{lemma}\label{lemma:frame isotropic setting}
  Let $\mfI_\infty=(G,\msH,\msK,\Phi_{0},\Phi_{1})\in\mS$.
  Then $\mfI_\infty$ is isotropic if and only if there is a basis $\{e_i\}$ of $\mfg$ such that (\ref{eq:canonicalbasisBianchiA}) holds with all the
  $p_i$ equal to $1/3$ and all the $o_i$ equal and $\geq 0$. 
\end{lemma}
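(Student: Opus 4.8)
The plan is to prove both directions via the correspondence between frames and elements of $\sfS$ established in Lemma~\ref{lemma:x arises from Iinf}, together with the characterisation of isotropy in Definition~\ref{def:isotropic}. Recall that $\mfI_\infty$ is isotropic precisely when $\msK=\roId/3$ and $\msH$ has constant curvature.

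First I would prove the easier direction: suppose there is a basis $\{e_i\}$ of $\mfg$ satisfying \eqref{eq:canonicalbasisBianchiA} with all $p_i=1/3$ and all $o_i$ equal (and $\geq 0$). Since the $p_i$ are the eigenvalues of $\msK$ and they all equal $1/3$, and $\{e_i\}$ is orthonormal with respect to $\msH$, we get $\msK=\roId/3$ directly. To see that $\msH$ has constant curvature, apply the formula \eqref{eq:Riccidiagitocm iso} for the Ricci tensor in terms of the commutator matrix: since $o_1=o_2=o_3$, the diagonal components $\bR_a=-\tfrac12(o_b-o_c)^2+\tfrac12 o_a^2$ all coincide, and the off-diagonal components vanish (this is the content of \cite[Lemma~19.11, p.~209]{RinCauchy}, already used in the proof of Lemma~\ref{lemma:psi sigma plus isotropic}). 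Hence $\overline{\mathrm{Ric}}$ is a multiple of $\msH$, and in three dimensions a constant multiple of the metric forces constant curvature. Thus $\mfI_\infty$ is isotropic.

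For the converse, assume $\mfI_\infty$ is isotropic, so $\msK=\roId/3$ and $\msH$ has constant curvature. By Lemma~\ref{lemma:x arises from Iinf} there is an orthonormal basis $\{e_i\}$ of $\mfg$ with respect to $\msH$ such that the associated commutator matrix $o$ is diagonal, $o=\rodiag(o_1,o_2,o_3)$, and such that the Weingarten matrix $P$ is diagonal; but since $\msK=\roId/3$, automatically $P=\roId/3$, so $p_i=1/3$ for all $i$ and the second equality in \eqref{eq:canonicalbasisBianchiA} holds. It remains to arrange that all the $o_i$ are equal and nonnegative. Since $\msH$ has constant curvature, the argument in the proof of Lemma~\ref{lemma:psi sigma plus isotropic} (as reprised in the proof of Proposition~\ref{prop:matter dom data on sing}) shows that there are only two possibilities for the $o_i$: either all three are equal, or one vanishes and the other two are equal; in the latter case $\mfI_\infty$ would be of isotropic Bianchi type $\mrVIIz$. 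I would then note that such data are excluded: either the statement is read within the convention that isotropic Bianchi type $\mrVIIz$ data are not considered (Remark~\ref{remark:LRS VIIz sing BI sing}), in which case such an $o$ does not arise, or — if one allows Bianchi type $\mrVIIz$ — the two non-zero $o_i$ being equal already gives the claimed conclusion after relabelling (since one could equally take the vanishing one as a degenerate case, but cleanly: isotropic Bianchi type $\mrVIIz$ is Bianchi type $\mrI$ by Lemma~\ref{lemma:LRS VIIz is I sing}, so all $o_i$ vanish and are trivially equal). So in all admissible cases all the $o_i$ are equal. Finally, to make them nonnegative, if $o_1=o_2=o_3=c<0$, replace the frame $\{e_i\}$ by $\{-e_1,-e_2,-e_3\}$ (or any single sign flip composed appropriately): under $e_i\mapsto -e_i$ for all $i$, the structure constants in $[e_i,e_j]=\e_{ijk}o_k e_k$ transform so that each $o_k$ changes sign, while orthonormality with respect to $\msH$ and the relation $\msK e_i=\tfrac13 e_i$ are preserved. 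This yields the desired basis with all $o_i=-c\geq 0$.

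The main obstacle I anticipate is the careful bookkeeping in the converse direction: verifying that the only two configurations of $(o_1,o_2,o_3)$ compatible with constant curvature are "all equal" and "one zero, other two equal", and then correctly disposing of the isotropic Bianchi type $\mrVIIz$ case in a way consistent with the standing conventions of the paper (namely that such data are either excluded outright or collapse to Bianchi type $\mrI$). The sign-flip step is routine but should be stated precisely, since flipping a single basis vector does \emph{not} flip all three $o_i$ — one needs either a simultaneous flip of all three, or to combine a sign flip with the observation that the signs of the $o_i$ are constrained by the Bianchi type classification in Table~\ref{table:bianchiA}; I would use the all-three-flip, which negates $o_1 o_2 o_3$-worth of each entry cleanly. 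None of this requires new ideas beyond what appears in the proofs of Lemma~\ref{lemma:psi sigma plus isotropic} and Lemma~\ref{lemma:x arises from Iinf}.
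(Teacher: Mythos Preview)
Your proof is correct and follows essentially the same approach as the paper, which simply cites Lemma~\ref{lemma:psi sigma plus isotropic sing} (whose content you have unpacked directly) and then handles the sign. One small correction: contrary to your parenthetical remark, flipping a \emph{single} basis vector does negate all three $o_i$, since by the transformation law \eqref{eq:o to o prime} with $A=\rodiag(-1,1,1)$ one has $A^toA=o$ (as $o$ is diagonal) and $\det A=-1$, giving $o'=-o$; this is precisely what the paper does.
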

\begin{remark}
  In the case of isotropic Bianchi type I and IX data, applying an isometry $T:\mfg\rightarrow\mfg$ with respect to $\msH$ to a frame $\{e_i\}$ as in the
  statement of the lemma yields a frame with the same properties, except that $o$ is transformed to $-o$ if $T$ is orientation reversing. 
\end{remark}
\begin{proof}
  The statement is an immediate consequence of Lemma~\ref{lemma:psi sigma plus isotropic sing}, up to the requirement that $o_i\geq 0$. However, interchanging
  $e_1$ with $-e_1$, if necessary, this condition can also be ensured; see (\ref{eq:o to o prime}). 
\end{proof}

\begin{lemma}\label{lemma:frame LRS case}
  Let $\mfI_\infty=(G,\msH,\msK,\Phi_{0},\Phi_{1})\in\mS$. Then $\mfI_\infty$ are LRS if and only if there is a frame $\{e_i\}$ of
  $\mfg$ satisfying the conditions of (\ref{eq:canonicalbasisBianchiA}) with $p_2=p_3$, $o_2=o_3$, signs as in Table~\ref{table:bianchiA}
  and either $p_1\neq p_2$ or $o_1\neq o_2$. 

  Assume $\mfI_\infty$ to be LRS. Let $\{e_i'\}$ be a second basis of $\mfg$ satisfying the conditions of
  (\ref{eq:canonicalbasisBianchiA}) with $p_i$ and $o_i$ replaced by $p_i'$ and $o_i'$ respectively, $p_2'=p_3'$, $o_2'=o_3'$ and signs as in
  Table~\ref{table:bianchiA}. Then, if $e_i=A_i^{\phantom{i}j}e_j'$, $A=\mathrm{diag}(\pm 1,A_2)$, where $A_2$ is an orthogonal
  $2\times 2$-matrix. Moreover, for all Bianchi types except I, $\det A$ has to equal $1$. 
\end{lemma}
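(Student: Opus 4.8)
\textbf{Proof proposal for Lemma~\ref{lemma:frame LRS case}.}

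The plan is to treat the two directions of the equivalence and then the frame-comparison statement in turn, relying throughout on the characterisation of LRS data on the singularity from Definition~\ref{def:isotropic} and on the diagonalisation procedure of Lemma~\ref{lemma:x arises from Iinf}. First, for the ``if'' direction: suppose a frame $\{e_i\}$ as described exists, with $p_2=p_3$, $o_2=o_3$, signs as in Table~\ref{table:bianchiA} and $(p_1,o_1)\neq (p_2,o_2)$. Define $\Psi_t$ by (\ref{eq:Psit definition}). Since the commutator matrix $o=\mathrm{diag}(o_1,o_2,o_2)$ is rotationally symmetric in the $23$-plane, $\Psi_t$ is a Lie algebra automorphism for every $t$ (this is the standard fact that $\Psi_t o \Psi_t^{-1}=o$ when $o_2=o_3$; one checks it on the bracket relations $[e_i,e_j]=\e_{ijk}o_k e_k$). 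Moreover $\msK=\mathrm{diag}(p_1,p_2,p_2)$ commutes with $\Psi_t$ since $\Psi_t$ acts as the identity on the $e_1$-eigenspace and as a rotation on the $p_2=p_3$ eigenspace. The condition $(p_1,o_1)\neq(p_2,o_2)$ ensures the data are not isotropic (if all $p_i$ were equal then $p_i=1/3$ by $\tr\msK=1$, and isotropy would additionally force all $o_i$ equal, contradicting the assumption). Hence Definition~\ref{def:isotropic} is satisfied and $\mfI_\infty$ is LRS.

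For the ``only if'' direction: assume $\mfI_\infty$ is LRS. By Definition~\ref{def:isotropic} there is an orthonormal frame $\{e_i\}$ (with respect to $\msH$) and a family $\Psi_t$ of Lie algebra isomorphisms satisfying (\ref{eq:Psit definition}) and $\msK\Psi_t=\Psi_t\msK$. From $\msK\Psi_t=\Psi_t\msK$ and the explicit form of $\Psi_t$ one reads off, differentiating at $t=0$, that $\msK$ preserves $\mrspan\{e_2,e_3\}$ and fixes $e_1$; since $\msK$ commutes with the full rotation $\Psi_t$ on this plane, $\msK$ restricted there is a multiple of the identity, so $\msK e_A=p_2 e_A$ for $A=2,3$ and $\msK e_1=p_1 e_1$. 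Next, $\Psi_t$ being a Lie algebra automorphism applied to the bracket relations forces the commutator matrix associated with $\{e_i\}$ to be rotationally invariant in the $23$-plane; combined with the fact (from Lemma~\ref{lemma:x arises from Iinf}) that after possibly a further orthogonal change of frame within this plane one can diagonalise $o$, and that the $23$-block of a rotation-invariant symmetric matrix is itself a multiple of the identity, we obtain $o=\mathrm{diag}(o_1,o_2,o_2)$. A rotation of $\{e_2,e_3\}$ does not disturb the form of $\msK$ (still diagonal with $p_2=p_3$) nor $\Psi_t$ up to conjugation, and the signs of the diagonal entries can be normalised to those of Table~\ref{table:bianchiA} by exchanging $e_1\mapsto -e_1$ if necessary, using the transformation law (\ref{eq:o to o prime}). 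Finally $(p_1,o_1)\neq(p_2,o_2)$ because otherwise all $p_i$ coincide (hence equal $1/3$) and all $o_i$ coincide, which by Lemma~\ref{lemma:frame isotropic setting} would make the data isotropic, contradicting the definition of LRS.

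For the frame-comparison statement: let $\{e_i\}$ and $\{e_i'\}$ both satisfy the stated conditions, write $e_i=A_i^{\phantom{i}j}e_j'$ with $A$ orthogonal. From $\msH(e_i,e_j)=\delta_{ij}$ and the analogous relation for $\{e_i'\}$, $A\in O(3)$. Since $K'=A^tKA$ with $K=\mathrm{diag}(p_1,p_2,p_2)$ and $K'=\mathrm{diag}(p_1',p_2,p_2)$ (the middle and last eigenvalues agree by hypothesis), and since $p_1\neq p_2$ in the generic LRS case $A$ must preserve the eigenspace decomposition $\mrspan\{e_1'\}\oplus\mrspan\{e_2',e_3'\}$, giving $A=\mathrm{diag}(\pm 1, A_2)$ with $A_2\in O(2)$; in the degenerate case where instead $o_1\neq o_2$ one argues identically using $o'=(\det A)^{-1}A^t o A$ and the fact that $o$ has the same block structure. (At least one of $p_1\neq p_2$, $o_1\neq o_2$ holds by the LRS condition, so this covers all cases.) It remains to show $\det A=1$ except in Bianchi type I. If $o_1\neq 0$, then $[e_2,e_3]=o_1 e_1$ and $[e_2',e_3']=o_1 e_1'$ (same $o_1$ since we have normalised signs); computing $[e_2,e_3]$ in terms of the primed frame gives $[e_2,e_3]=(\det A_2)\, o_1\, (A^t)_1^{\phantom{1}j}e_j'$ — i.e.\ comparing with $o_1 e_1 = \pm o_1 e_1'$ forces $(\pm 1)=\det A_2$, hence $\det A=(\pm 1)\det A_2=1$. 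If $o_1=0$ but $o_2=o_3\neq 0$ (Bianchi types $\mrII$, $\mrVIz$, $\mrVIIz$, $\mrVIII$, $\mrIX$ all have at least two nonzero $o_i$), an analogous bracket computation using $[e_1,e_2]=\pm o_2 e_3$ yields the same conclusion. Only in Bianchi type I are all $o_i=0$, and then no such constraint arises and $\det A=-1$ is possible (e.g.\ $A_2$ a reflection).

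I expect the main obstacle to be the careful bookkeeping of signs and orientations in the frame-comparison part: ensuring that after the various allowed moves (rotation in the $23$-plane, $e_1\mapsto -e_1$) one genuinely lands on the Table~\ref{table:bianchiA} sign convention, and tracking exactly how $\det A$ enters (\ref{eq:o to o prime}) so that the bracket identities pin down $\det A=1$ for every non-Bianchi-I type. The existence/commutativity arguments in the first two paragraphs are routine once one has the diagonalisation of Lemma~\ref{lemma:x arises from Iinf} in hand; the genuinely delicate point is the $\det A=1$ dichotomy, which is the assertion the lemma is really designed to record for later use.
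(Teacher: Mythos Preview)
Your approach is essentially the paper's: the equivalence is the content of Lemma~\ref{lemma:sfB char per symm VIIz sing} (the paper simply cites it, you unpack it), and the frame-comparison is an eigenspace argument followed by a sign check. The structure is fine, but the ``delicate point'' you flagged is exactly where your argument slips.

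You write $[e_2',e_3']=o_1 e_1'$ ``(same $o_1$ since we have normalised signs)'', and then deduce $\det A=1$ from this. That is circular: normalising to Table~\ref{table:bianchiA} only tells you $o_1$ and $o_1'$ have the same \emph{sign}, not that they are equal. Your bracket computation actually yields $(\det A_2)\,o_1'=A_{11}\,o_1$, i.e.\ $o_1'=(\det A)^{-1}o_1$, which is just the $(1,1)$ entry of the transformation law~(\ref{eq:o to o prime}) in the block-diagonal case. From there the sign constraint gives $\det A>0$, hence $\det A=1$, and only \emph{then} do you get $o_1'=o_1$. The paper does exactly this, but directly from~(\ref{eq:o to o prime}) rather than recomputing a bracket. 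Also, your parenthetical that Bianchi~II has ``at least two nonzero $o_i$'' is false (exactly one is nonzero), though it is harmless: in LRS Bianchi~II one has $o_1\neq0$, $o_2=o_3=0$, so your first case already covers it and the second case is never invoked for the types that actually admit LRS data.
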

\begin{proof}
  The first statement of the lemma is an immediate consequence of Lemma~\ref{lemma:sfB char per symm VIIz sing} (up to relabeling of the frame and
  interchanging $e_1$ with $-e_1$, if necessary).  

  Next, let $\{e_i\}$ and $\{e_i'\}$ be frames as in the statement of the theorem. If $p_1\neq p_2$, then $p_1'$ has to differ from $p_2'$, and $p_1=p_1'$.
  This means that $e_1$ and $e_1'$ are unit eigenvectors corresponding to the same distinct eigenvalue, so that $e_1'=\pm e_1$. The argument in case $o_1\neq o_2$
  is similar due to (\ref{eq:o to o prime}). Thus $A=\mathrm{diag}(\pm 1,A_2)$, where $A_2$ is an orthogonal $2\times 2$-matrix. In order for the
  signs in Table~\ref{table:bianchiA} to be respected, $\det A$ has to equal $1$, except for Bianchi type I. 
\end{proof}

\begin{lemma}\label{lemma:frame general case}
  Let $\mfI_\infty=(G,\msH,\msK,\Phi_{0},\Phi_{1})\in\mS$. Assume that $\mfI_\infty$ are neither isotropic nor LRS. Then there is a frame $\{e_i\}$ of
  $\mfg$ satisfying the conditions of (\ref{eq:canonicalbasisBianchiA}), with signs as in Table~\ref{table:bianchiA}. Moreover, the following Bianchi
  type specific conditions hold:
  \begin{itemize}
  \item In Bianchi type I, $p_1>p_2>p_3$.
  \item In Bianchi type II, $p_2>p_3$.
  \item In Bianchi type VI${}_0$, $p_2\geq p_3$ and $o_2\geq -o_3$ in case $p_2=p_3$.
  \item In Bianchi type VII${}_0$ and VIII, $p_2\geq p_3$ and $o_2>o_3$ in case $p_2=p_3$.
  \item In Bianchi type IX, $p_1>p_2>p_3$ if the $p_i$ are all distinct; $p_1\neq p_2=p_3$ and $o_2>o_3$ if two $p_i$'s coincide
    and are different from the third; and $o_1>o_2>o_3$ if $p_1=p_2=p_3$.
  \end{itemize}
  If $\{e_i'\}$ is a frame of $\mfg$ satisfying the same conditions with $p_i$ and $o_i$ replaced by $p_i'$ and $o_i'$ respectively and
  $e_i=A_i^{\phantom{i}j}e_j'$, then $A$ is a diagonal element of $\mathrm{O}(3)$ in the case of Bianchi type I and a diagonal element of
  $\mathrm{SO}(3)$ for the other Bianchi types, with one exception: for Bianchi type VI${}_0$ data such that $p_2=p_3$ and $o_2=-o_3$,
  replacing $e_2$ with $e_3$ is also allowed.
\end{lemma}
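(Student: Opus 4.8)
\textbf{Proof proposal for Lemma~\ref{lemma:frame general case}.} The plan is to split the statement into two parts: existence of an adapted frame satisfying the Bianchi-type-specific ordering conditions, and rigidity of such a frame up to the claimed group of symmetries. For existence, I would start from Lemma~\ref{lemma:x arises from Iinf}, which provides an orthonormal basis $\{e_i\}$ diagonalising both $\msK$ and the commutator matrix $o$; passing to such a frame, $o=\rodiag(o_1,o_2,o_3)$ and $\msK$ has eigenvalue $p_i$ on $e_i$. The remaining freedom is to permute the $e_i$ and to flip signs of individual basis vectors. Flipping signs acts on $o$ by the transformation law \eqref{eq:o to o prime} (flipping one $e_i$ multiplies $o$ by $-1$, i.e.\ changes all three $o_k$ signs simultaneously, so flipping two $e_i$ fixes $o$); permuting the $e_i$ permutes simultaneously the $p_i$ and the $o_i$. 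Using the sign flips one first arranges the signs of the $o_i$ to agree with Table~\ref{table:bianchiA} (this is exactly the argument used in \cite[Lemma~19.8]{RinCauchy}, and is already invoked before Table~\ref{table:bianchiA}). Then one uses permutations to impose the stated ordering on the $p_i$, breaking remaining ties with the $o_i$: e.g.\ in Bianchi type I, where all $o_i=0$ and, since $\mfI_\infty$ is neither isotropic nor LRS, the $p_i$ are pairwise distinct, a permutation arranges $p_1>p_2>p_3$; in type II, the single nonzero $o_i$ is fixed in slot $1$ by the sign/shape requirement, and the non-LRS hypothesis gives $p_2\neq p_3$, so a transposition of $e_2,e_3$ gives $p_2>p_3$; and so on through VI${}_0$, VII${}_0$, VIII, IX, in each case using the classification of which $o_i$ vanish (Definition~\ref{def:mfS def}) to see which slots are constrained and the non-isotropic/non-LRS hypothesis to guarantee the relevant strict inequalities can be achieved. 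The subcases of type IX (all $p_i$ distinct; exactly two equal; all equal) are handled by the same bookkeeping, noting that ``all $p_i$ equal'' forces $\msK=\roId/3$, so non-isotropy forces the $o_i$ to be pairwise distinct and a permutation yields $o_1>o_2>o_3$.

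For rigidity, suppose $\{e_i\}$ and $\{e_i'\}$ both satisfy the stated conditions and $e_i=A_i^{\phantom{i}j}e_j'$ with $A$ orthogonal. Since $K'=A^tKA$ and $K,K'$ are both diagonal with the \emph{same} ordered eigenvalues (the ordering being canonical in each Bianchi type by the conditions imposed), $A$ must preserve each eigenspace of $\msK$. When all $p_i$ are distinct (type I, generic type IX, and whenever the ordering $p_1>p_2>p_3$ holds), each eigenspace is one-dimensional, so $A$ is diagonal with entries $\pm1$. When exactly two $p_i$ coincide, say $p_2=p_3\neq p_1$, then $A=\rodiag(\pm1,A_2)$ with $A_2\in\mathrm{O}(2)$; but the relation \eqref{eq:o to o prime} together with the requirement that $o'$ and $o$ both be diagonal with the imposed sign pattern and the tie-break $o_2>o_3$ forces $A_2$ to be diagonal as well (the only diagonal elements of $\mathrm{O}(2)$ being $\rodiag(\pm1,\pm1)$), except precisely in the Bianchi type VI${}_0$ situation $o_2=-o_3$, where swapping $e_2\leftrightarrow e_3$ is consistent with all the conditions — this is the stated exception. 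When all $p_i$ coincide (type IX isotropic-$\msK$ case), $A$ is a priori an arbitrary element of $\mathrm{O}(3)$, but \eqref{eq:o to o prime} with $o,o'$ diagonal and the ordering $o_1>o_2>o_3$ forces $A$ to be a permutation matrix times a diagonal sign matrix, and then preserving the \emph{order} of the $o_i$ forces the permutation to be trivial, leaving $A$ diagonal with $\pm1$ entries. Finally, to cut $\mathrm{O}(3)$ down to $\mathrm{SO}(3)$ in all Bianchi types other than I: for type I, $o=o'=0$ so $\det A=\pm1$ is unconstrained; for every other type, at least one $o_i\neq o_i'$ is nonzero, and \eqref{eq:o to o prime} reads $o'=(\det A)^{-1}A^toA$, so with $A$ diagonal ($A^toA=o$) we get $o'=(\det A)^{-1}o$, forcing $\det A=1$ since otherwise the signs of the nonzero $o_i$ would flip, contradicting the fixed Table~\ref{table:bianchiA} sign convention.

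The main obstacle I anticipate is not any single estimate but the sheer case analysis: one must check, Bianchi type by Bianchi type and tie-pattern by tie-pattern, that (a) the ordering conditions can be met using only permutations and sign flips, given the non-isotropic/non-LRS hypothesis, and (b) the residual freedom is exactly the diagonal subgroup of $\mathrm{SO}(3)$ (resp.\ $\mathrm{O}(3)$ for type I), with the one genuine VI${}_0$ exception. The isotropic-$\msK$ subcase of type IX — where $\msK$ gives no information and all the rigidity must come from the commutator matrix via \eqref{eq:o to o prime} — is the most delicate, and the VI${}_0$ case $p_2=p_3,\ o_2=-o_3$ is the one place where one must be careful not to over-claim rigidity. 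Much of the underlying linear algebra (the transformation law \eqref{eq:o to o prime}, simultaneous diagonalisation, the structure of $\mathrm{O}(2)$ and $\mathrm{O}(3)$) is already available from \cite{RinCauchy} and from Lemmas~\ref{lemma:x arises from Iinf}, \ref{lemma:psi sigma plus isotropic sing}, \ref{lemma:sfB char per symm VIIz sing}, so I would lean on those and keep the new work confined to the organised enumeration of cases; the bulk of the write-up will be a table-like walk through the six Bianchi types.
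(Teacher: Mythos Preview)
Your proposal is correct and follows essentially the same approach as the paper: both arguments establish existence by starting from a simultaneously diagonalising frame (Lemma~\ref{lemma:x arises from Iinf}) and then permuting/sign-flipping to enforce the ordering conventions Bianchi type by Bianchi type, and both establish rigidity by combining eigenspace preservation for $\msK$ with the transformation law \eqref{eq:o to o prime} and the Table~\ref{table:bianchiA} sign convention. Your explicit observation that for diagonal $A$ one has $A^toA=o$ and hence $o'=(\det A)^{-1}o$ is a clean way to handle the $\mathrm{SO}(3)$ versus $\mathrm{O}(3)$ distinction; the paper argues this point slightly more implicitly, case by case, but the content is the same.
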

\begin{proof}
  We first verify that $\{e_i\}$ can be chosen so that the Bianchi specific conditions are satisfied. In the case of Bianchi type I, we can
  assume the $p_i$ to all be distinct, since the data would otherwise be isotropic or LRS. Moreover, the condition $p_1>p_2>p_3$ can be achieved
  by permuting
  the basis vectors. In the case of Bianchi II, $o_2=o_3=0$. We can thus assume $p_2$ and $p_3$ to be distinct, since the data would otherwise
  be LRS. If $p_2>p_3$, nothing needs to be done. If $p_2<p_3$, we permute $e_2$ and $e_3$ while, at the same time, mapping $e_1$ to $-e_1$. This
  yields the desired frame. The argument in the case of Bianchi types VI${}_0$, VII${}_0$ and VIII is similar. Consider Bianchi type IX. If all
  the $p_i$ are distinct, first permuting the frame vectors appropriately and then flipping the sign of one vector, if necessary, yields
  $p_1>p_2>p_3$. If two $p_i$ are equal and different from the third, we can similarly achieve $p_1\neq p_2=p_3$. Ensuring $o_2>o_3$ is then
  achieved by an argument similar to the one in the case of Bianchi type II. If all the $p_i$ coincide, then all the $o_i$ are distinct (since
  the data would otherwise be isotropic or LRS). We can then assume $o_1>o_2>o_3$ by an argument similar to the case that all the $p_i$ are distinct.

  Assume now that we have a second frame $\{e_i'\}$ with the same properties. In the case of Bianchi type I, since the $p_i$ and $p_i'$ are
  distinct and ordered, we have to have $p_i=p_i'$ for all $i$, so that $e_i=\pm e_i'$. This means that $A$ is a diagonal element of
  $\mathrm{O}(3)$. In the case of Bianchi type II, since $o_1>0$ and $o_2=o_3=0$ (and similarly for $o_i'$), $A=\mathrm{diag}(\pm 1,A_2)$,
  where $A_2$ is an orthogonal $2\times 2$-matrix. Since $p_2>p_3$ (and similarly for the $p_i'$), we, additionally have to have $p_B=p_B'$
  for $B=2,3$, $e_2'=\pm e_2$ and $e_3'=\pm e_3$. Thus $A$ is a diagonal element of $\mathrm{O}(3)$. However, since $o_1$ and $o_1'$ are both
  strictly positive, we have to have $A\in\mathrm{SO}(3)$. The arguments in the remaining cases are similar. The lemma follows. 
\end{proof}

Next, we prove Theorem~\ref{thm:dataonsingtosolution} and Remark~\ref{remark:form of dev of dos}. 

\begin{proof}[Proof of Theorem~\ref{thm:dataonsingtosolution} and Remark~\ref{remark:form of dev of dos}]
  Let $p_i$ denote the eigenvalues of $\msK$ and choose $\sigma_{+}$, $\sigma_{-}$ and $\sigma_i$ so that (\ref{eq:pisitosigmapm}) holds. Then
  \begin{equation}\label{eq:sigmapmPhioneascon}
    \sigma_{+}^{2}+\sigma_{-}^{2}+\bPhi_{1}^{2}/6=1,
  \end{equation}
  where the relation between the $\Phi_{1}$ appearing in the statement of the theorem and $\bPhi_{1}$ is $\bPhi_{1}:=3\Phi_{1}$; this relation
  holds throughout the present proof. Next, let $\{e_i\}$ be a frame satisfying the conditions of (\ref{eq:canonicalbasisBianchiA}). 
  Assume the frame to be as described in one of Lemmas~\ref{lemma:frame isotropic setting}; \ref{lemma:frame LRS case}; and \ref{lemma:frame general case}
  according to whether the data are isotropic; LRS; or neither isotropic nor LRS, respectively. Note, in particular,
  that the $o_i$ have signs as in Table~\ref{table:bianchiA}.

  \textbf{Uniqueness.} Assume that there is a development $(M,g,\phi)=\md[V](\mfI)$ as in the statement of the theorem. In particular $M:=G\times J$,
  where $J$ is an open interval, and $g$ can be written
  \begin{equation}\label{eq:g unique d on sing}
    g=-dt\otimes dt+\textstyle{\sum}_{i=1}^{3}a_{i}^{2}(t)\eta^{i}\otimes\eta^{i},
  \end{equation}
  where $a_i\in C^{\infty}(J,(0,\infty))$, $i=1,2,3$, $\{\eta^{i}\}$ is the dual basis of a basis $\{\fbar_{i}\}$ of the Lie algebra
  $\mathfrak{g}$ of $G$ with associated commutator matrix $\bn=\rodiag(\bn_1,\bn_2,\bn_3)$, and the $\bn_i$ have signs as indicated in
  Table~\ref{table:bianchiA}. This statement follows from the proof of Proposition~\ref{prop:unique max dev}; see, in particular, 
  (\ref{eq:for the ai}) and the adjacent text. We also assume that $\msH(\fbar_i,\fbar_i)=1$ (no summation) for all $i$. 
  Note that replacing $\fbar_i$ with $-\fbar_i$ leaves the metric invariant. However, with the exception of Bianchi types I and VI${}_0$, we have to
  flip two basis vectors in this way in order for the sign convention in Table~\ref{table:bianchiA} to be respected; in the case of Bianchi type
  VI${}_0$, interchanging $\fbar_2$ and $\fbar_3$ respects the signs in Table~\ref{table:bianchiA}.  By assumption, $(M,g,\phi)$ has a
  crushing singularity at the left endpoint of $J$, say $t_-$. Fix $t_0>t_-$ such that $\theta(t)>0$ on $(t_-,t_{0})$. On this
  interval, the expansion normalised Weingarten map $\mK$ is well defined, and the $\fbar_{k}$ are eigenvectors. 

  Let $\theta_{i}:=\dot{a}_{i}/a_{i}$ and $\ell_i:=\theta_i/\theta$, so that $\mK \fbar_{i}=\ell_i\fbar_{i}$ (no summation). Combining this observation with
  the second equality in (\ref{eq:canonicalbasisBianchiA}) and the fact that $\mK$ converges to $\msK$ yields the conclusion that $\ell_i$ converges
  to an element in $\{p_1,p_2,p_3\}$ as $t\downarrow t_-$, say $\bp_i$. Moreover, $\fbar_i$ is an eigenvector of $\msK$ corresponding to $\bp_i$. Note
  that $\{\bp_1,\bp_2,\bp_3\}=\{p_1,p_2,p_3\}$. Next, note that, by assumption,
  \begin{equation}\label{eq:hgdef limit}
    \mH:=\bge(\theta^{\mK}\cdot,\theta^\mK\cdot)\rightarrow\msH.
  \end{equation}
  This means, in particular, that $\theta^{\ell_i}a_i$ converges to $1$ (recall that the $\fbar_i$ are unit vectors with respect to $\msH$). Let
  $\hf_i:=a_i^{-1}\theta^{-\ell_i}\fbar_i$ (no summation). Then $\{\hf_i\}$ is an orthonormal frame with respect to $\mH$ introduced in
  (\ref{eq:hgdef limit}). Since $\mH$ converges to $\msH$ and $a_i\theta^{\ell_i}$ converges to $1$, it follows that $\{\fbar_i\}$ is an
  orthonormal frame with respect to $\msH$. This means that $\{\fbar_i\}$ satisfies all the conditions of (\ref{eq:canonicalbasisBianchiA}),
  but with $o_i$ and $p_i$ replaced by $\bn_i$ and $\bp_i$ respectively. Moreover, the $\bn_i$ have signs as in Table~\ref{table:bianchiA}.
  If we let $\bn:=\mathrm{diag}(\bn_1,\bn_2,\bn_3)$, then there is an $A\in \mathrm{O}(3)$ such that $e_i=A_{i}^{\phantom{i}j}\fbar_j$ and
  (\ref{eq:o to o prime}) hold with $o'$ replaced by $\bn$. This means, in particular, that unless $G$ is of Bianchi type I or VI${}_0$,
  $A\in\mathrm{SO}(3)$, since the $o_i$ and the $\bn_i$ have signs as in Table~\ref{table:bianchiA}.

  If $G$ is of Bianchi type I, the transformation law does not say anything, since $o=\bn=0$ in that case. However, we can
  permute the frame so that $\bp_i=p_i$. In the case of Bianchi type IX, we can similarly permute the frame (and, in the end, replace $\fbar_1$ by
  $-\fbar_1$, if necessary) in order to ensure $\bp_i=p_i$, while respecting $\bn_i>0$. For the remaining Bianchi types, it is clear that
  $e_1$ is distinguished by the requirement that the $\bn_i$ and $o_i$ have signs as in Table~\ref{table:bianchiA}. This means that $\fbar_1=\pm e_1$
  and $\bp_1=p_1$. Finally, interchanging $\fbar_2$ and $\fbar_3$, if necessary, and changing $\fbar_1$ to $-\fbar_1$, if necessary, in the case of
  Bianchi types II-VIII, it can be ensured that $\bp_B=p_B$ for $B=2,3$. Finally, if $p_2=p_3$, we can, additionally, ensure that $\bn_2\geq\bn_3$
  for Bianchi types II, VII${}_0$ and VIII. To summarise, we can assume that $\{\fbar_i\}$ satisfies all the conditions
  of (\ref{eq:canonicalbasisBianchiA}), but with $o_i$ and $p_i$ replaced by $\bn_i$ and $\bp_i$ respectively. Moreover, we can assume that $\bp_i=p_i$.
  Finally, we can assume the $\bn_i$ have signs as in Table~\ref{table:bianchiA}.

  In the case of Bianchi type VI${}_0$, we have to have $\fbar_1=\pm e_1$ by the above argument. If $p_2\neq p_3$, then we have to have $\fbar_2=\pm e_2$
  and $\fbar_3=\pm e_3$. This means that $o_i=\bn_i$ for all $i$. If $p_2=p_3$, we can interchange $\fbar_2$ and $\fbar_3$, if necessary, to ensure that
  $\bn_2\geq -\bn_3$. If $\bn_2>-\bn_3$, we then have to have $\fbar_i=\pm e_i$ for all $i$. Thus $o_i=\bn_i$ for all $i$. If $\bn_2=-\bn_3$, then we also
  have $o_i=\bn_i$ for all $i$. Moreover, by interchanging $\fbar_2$ and $\fbar_3$, if necessary, we can assume that $\fbar_i=\pm e_i$. To summarise, in
  the case of Bianchi types I and VI${}_0$, we can always assume $o_i=\bn_i$ for all $i$. Moreover, in the case of Bianchi type VI${}_0$, we can assume
  that $\fbar_i=\pm e_i$ for all $i$.   

  \textit{Isotropic setting.} In case the data on the singularity are isotropic, then $\msK=\mathrm{Id}/3$, so that $p_i=1/3$, irrespective of the
  choice of frame. Moreover, the $o_i$ are all equal; see Lemma~\ref{lemma:frame isotropic setting}. Due to the transformation law
  (\ref{eq:o to o prime}) and the requirement that the $o_i$ and the $\bn_i$ both have signs as in Table~\ref{table:bianchiA}, it follows that
  $o_i=\bn_i$ for all $i$. To summarise, $\bp_i=p_i$ and $\bn_i=o_i$ for all $i$.

  \textit{LRS setting.} Assume the data to be LRS. Then $\{e_i\}$ is such that $o_2=o_3$ and $p_2=p_3$, but either $p_1\neq p_2$ or $o_1\neq o_2$; see
  Lemma~\ref{lemma:frame LRS case}. Since $\fbar_i$ is an eigenvector of $\msK$ corresponding to the eigenvalue $p_i$, it is clear that if $p_1\neq p_2$,
  then $\fbar_1=\pm e_1$. Similarly, if $o_1\neq o_2$ and $\mfT\neq\mrIX$, then $\fbar_1=\pm e_1$ due to the
  transformation law (\ref{eq:o to o prime}) and the fact that the $o_i$ and the $\bn_i$ have signs as in Table~\ref{table:bianchiA}. Thus
  $\bn_1=o_1$. Again, due to the transformation law (\ref{eq:o to o prime}) and the fact that the $o_i$ and the $\bn_i$ have signs as in
  Table~\ref{table:bianchiA}, we also conclude that $\bn_2=\bn_3=o_2$. If $\mfT=\mrIX$ and $p_1=p_2=p_3$, then we can permute the frame (and replace
  $\fbar_1$ by $-\fbar_1$, if necessary) in order to ensure that $\bn_i=o_i$. To summarise, $\bp_i=p_i$ and $\bn_i=o_i$. Due to
  Lemma~\ref{lemma:frame LRS case} we conclude that $A$ must take the form $A=\mathrm{diag}(\pm 1,A_2)$, where $A_2$ is an orthogonal matrix. Moreover,
  $\det A=1$ for all Bianchi types except I.

  \textit{Data that are neither isotropic nor LRS.} In the case of Bianchi types I and VI${}_0$, we already know that $\bp_i=p_i$ and that
  $\bn_i=o_i$. Moreover, $\fbar_i=\pm e_i$ (in the case of Bianchi type VI${}_0$, we already know this to be true, and in the case of Bianchi
  type I it follows from the fact that $p_1>p_2>p_3$). 
  In the case of Bianchi types II, VII${}_0$ and VIII, we know that $\fbar_1=\pm e_1$. If $p_2\neq p_3$, we have to have $\fbar_B=\pm e_B$
  for $B\in \{2,3\}$. If $p_2=p_3$, we know that $o_2>o_3$ and $\bn_2>\bn_3$, so that $\fbar_B=\pm e_B$ for $B\in \{2,3\}$. We thus have $\bp_i=p_i$,
  $\bn_i=o_i$ and $\fbar_i=\pm e_i$ for these Bianchi types. Consider Bianchi type IX. If all the $p_i$ are distinct, then $\fbar_i=\pm e_i$ for all $i$.
  This means that $\bn_i=o_i$ and $\bp_i=p_i$ for all $i$. Assume that two of the $p_i$ are equal and different from the third. Then $p_1\neq p_2=p_3$,
  $\fbar_1=\pm e_1$ and $\bn_1=o_1$. Since $o_2> o_3$, we can exchange $\fbar_2$ and $\fbar_3$ and replace $\fbar_1$ with $-\fbar_1$ (if necessary) so
  that $\bn_2>\bn_3$. This means that $\fbar_A=\pm e_A$ for $A\in \{2,3\}$. To summarise, $\bp_i=p_i$, $\bn_i=n_i$ and $\fbar_i=\pm e_i$. Finally, if
  all the $p_i$ are equal, permuting the elements of the frame $\{\fbar_i\}$ and interchanging $\fbar_1$ with $-\fbar_1$, if necessary, it can be
  ensured that $\bn_1>\bn_2>\bn_3$. This means that $\fbar_i=\pm e_i$, so that $\bn_i=o_i$. To summarise the case that the data are neither isotropic
  nor LRS, we can (by permuting the elements of $\{\fbar_i\}$) ensure that $\fbar_i=\pm e_i$, that $p_i=\bp_i$ and that $\bn_i=o_i$. 
  
  To summarise: in what follows, we can always assume that $\bn_i=o_i$ and $\bp_i=p_i$ for all $i$.
  
  \textit{Asymptotics.} Defining $n_i$ by (\ref{eq:chco}), 
  \[
  \frac{n_1}{\theta}=\frac{a_1\theta^{\ell_1}}{a_2\theta^{\ell_2}a_3\theta^{\ell_3}}\bn_1 \theta^{\ell_2+\ell_3-\ell_1-1},
  \]
  and similarly for $n_2$ and $n_3$. Since the $\ell_i$ sum to one, the last factor equals $\theta^{-2\ell_1}$. In case $p_1>0$, it follows that
  $n_1/\theta$ converges to zero; recall that $\theta^{\ell_i}a_i$ converges to $1$. In particular, if all the $p_i$ are strictly positive, all
  the $n_i/\theta$ converge to zero. Assume that $p_3\leq 0$. Due to Definition~\ref{def:ndvacidonbbssh}, there are two possibilities.

  \textit{The first possibility} is that the eigenvalues of $\msK$ are all $<1$. Then, since the eigenvalues sum up to one, it is clear that
  $p_1,p_2>0$. This means that $e_3$ is uniquely determined (up to a constant multiple). Since
  \begin{equation}\label{eq:f one eigenvector}
    \msK \fbar_3=\lim_{t\downarrow t_-}\mK \fbar_3=\lim_{t\downarrow t_-}\ell_3 \fbar_3=p_3\fbar_3,
  \end{equation}
  it is clear that $\fbar_3=\pm e_3$. Next, by assumption, the vector subspace of $\mfg$ spanned by $e_1$ and $e_2$ is a subalgebra. This means that
  $e_3$ is perpendicular to $[e_1,e_2]$, so that $\fbar_3$ is perpendicular to $[e_1,e_2]$. Since $\fbar_3$ is also perpendicular to $\fbar_1$ and
  $\fbar_2$, it follows that there are constants $\kappa_{A}^B$, $A,B\in \{1,2\}$ such that $\fbar_A=\kappa_A^B e_B$. This means that $[\fbar_1,\fbar_2]$
  is a constant multiple of $[e_1,e_2]$, so that $\fbar_3$ is perpendicular to $[\fbar_1,\fbar_2]=\bn_3 \fbar_3$. Thus $\bn_3=0$ and
  \begin{equation}\label{eq:ni by theta to zero}
    \lim_{t\downarrow t_-}[n_i(t)/\theta(t)]=0
  \end{equation}
  for all $i$, irrespective of whether $p_3\leq 0$ or not. Combining this observation with (\ref{eq:thetad}) and the assumptions, which imply
  \begin{equation}\label{eq:sigma sq pt sq limit}
    \frac{3}{2}\theta^{-2}\sigma^{ij}\sigma_{ij}+\frac{3}{2}\theta^{-2}\phi_t^2\rightarrow 1,
  \end{equation}
  it follows that $\theta_t/\theta^2\rightarrow -1$. This means, in particular, that $\theta$ blows up in finite time to the past, so that we can
  assume $t_-=0$. Moreover, it means that $\theta$ is not integrable to the past. Introducing a time coordinate $\tau$ according to (\ref{eq:dtdtau}),
  it is thus clear that the range of $\tau$ is of the form $(-\infty,\tau_0)$. From now on, a prime denotes differentiation with respect to $\tau$.

  \textit{The second possibility} is that $1$ is an eigenvalue of $\msK$. We then assume $\msK e_1=e_1$, so that $n_1/\theta$ converges to zero as
  above. Moreover, $n_1n_i/\theta^2$ converge to zero for $i=2,3$. Next, note that, by the assumptions, (\ref{eq:sigma sq pt sq limit}) holds. 
  Combining this observation with (\ref{eq:hamconfin}) and the fact that $n_1n_i/\theta^2\rightarrow 0$
  for all $i$, it follows that $V\circ\phi/\theta^2\rightarrow 0$ and $(n_2-n_3)/\theta\rightarrow 0$ (\textit{note that this is the only point at
    which we use the assumption that} $V\geq 0$). Combining these observations with
  (\ref{eq:thetad}) and
  (\ref{eq:sigma sq pt sq limit}), it is clear that $\theta_t/\theta^2\rightarrow -1$. Thus $t_-=0$ and the range of $\tau$ is of the form
  $(-\infty,\tau_0)$, as before.

  Next, define $\Sigma_{i}$, $\Sigma_{\pm}$ and $N_i$ by
  \begin{equation}\label{eq:Sigmai Spm def}
    \Sigma_{i}:=\theta_{i}/\theta-1/3,\ \ \
    \Sigma_{+}:=\tfrac{3}{2}(\Sigma_{2}+\Sigma_{3})=-\tfrac{3}{2}\Sigma_{1},\ \ \
    \Sigma_{-}:=\tfrac{\sqrt{3}}{2}(\Sigma_{2}-\Sigma_{3}),\ \ \
    N_{i}:=\tfrac{n_{i}}{\theta}.
  \end{equation}
  Finally, define $\phi_{1}=\phi'$ and $\phi_{0}=\phi$. Then the arguments presented in Subsections~\ref{ssection:EquationsBclassAdev} and
  \ref{ssection:whsuform} yield the conclusion that (\ref{eq:thetaprime})--(\ref{eq:constraint}) hold. In case $1$ is an eigenvalue of $\msK$,
  we can assume that $(\Sigma_+,\Sigma_-)$ converges to $(-1,0)$. Moreover, in this case, we assume $V\geq 0$. Note also that, due to
  (\ref{eq:Xaltversion}) and the above observations, $X$ converges to $1$ in this setting. Thus either the conditions of
  Lemma~\ref{lemma:tautimecoord} or the conditions of Lemma~\ref{lemma:tautimecoord BIX} are satisfied. Thus Proposition~\ref{prop:limitcharsp}
  applies, so that $\Sm=0$ and
  $\Nt=\Nth$. This means that $\sigma_{22}=\sigma_{33}$, so that $a_2/a_3$ is constant due to the first equality in (\ref{eq:for the ai}).
  On the other hand, we also know that $\theta^{\ell_A}a_A\rightarrow 1$ for $A\in\{2,3\}$ and that $\ell_2=\ell_3$. This means that $a_2/a_3$
  converges to $1$. Thus $a_2=a_3$. Combining this observation with
  (\ref{eq:chco}) and the fact that $\Nt=\Nth$, it follows that $\bn_2=\bn_3$. This means that $o_2=o_3$ and $p_2=p_3=0$, so that $\mfI_\infty$
  are LRS. Note that this proves the necessity of the last requirement in Definition~\ref{def:ndvacidonbbssh} in case $1$ is an eigenvalue of
  $\msK$. In particular, the last statement of Theorem~\ref{thm:dataonsingtosolution} follows.

  If $1$ is an eigenvalue of $\msK$, then we can, by the above, assume that $p_1=1$, $p_2=p_3=0$, $a_2=a_3$, $\Nt=\Nth$ and $\Sm=0$, and we do so in
  what follows. If the eigenvalues of $\msK$ are all $<1$, then, by the argument presented below (\ref{eq:f one eigenvector}), it follows that if
  $p_i\leq 0$, then $N_i=0$. Moreover, if $p_3\leq 0$, then $p_1,p_2>0$. If the eigenvalues of $\msK$ are $<1$ and there is a $p_i\leq 0$, we, from
  now on, assume that it is $p_3$. In this case, we can thus always assume $p_1,p_2>0$, so that $f_i(2,\sigma_+,\sigma_-)>0$ for $i=1,2$; see
  (\ref{eq:pisitosigmapm}). If $p_{3}>0$, we can, additionally, assume that $f_3(2,\sigma_{+},\sigma_-)>0$; see (\ref{eq:pisitosigmapm}). Keeping
  (\ref{eq:PsiOmegadef}) in mind, the Hamiltonian constraint (\ref{eq:constraint}) can be written
  \begin{equation}\label{eq:HamConphioVexp}
    \Sp^2+\Sm^2+\tfrac{1}{6}\phi_{1}^{2}+\tfrac{3V\circ\phi_{0}}{\theta^{2}}+\tfrac{3}{4}[\No^2+\Nt^2+\Nth^2-2(\No\Nt+\Nt\Nth+\No\Nth)]=1.
  \end{equation}
  By assumption $\Sigma_{\pm}$ converges to $\sigma_{\pm}$ and $\phi_{1}$ converges to $\bPhi_{1}$. Moreover, the sum of the first three terms on
  the left hand side converges to $1$. Next,  either all the $N_i$ tend to zero as $\tau\rightarrow-\infty$, or $\Nt=\Nth$, $\Sm=0$, $\No$ tends
  to zero and $\No\Nt$ tends to zero as $\tau\rightarrow-\infty$. Combining these observations with (\ref{eq:HamConphioVexp}), it follows that
  $\Omega$ tends to zero. Moreover, keeping the definition of $q$, i.e. (\ref{eq:qdef}), in mind, it follows that
  $q$ converges to $2$. Combining this fact with the above observations and (\ref{eq:nip}) yields the conclusion that if all the eigenvalues of
  $\msK$ are $<1$, then the $N_{i}$ converge to zero exponentially. Moreover, if one of the eigenvalues of $\msK$ equals $1$, then $\No$ and
  $\No\Nt$ converge to zero exponentially. Next, note that since
  $|\phi'|\leq \sqrt{6}+\dots$ (where the dots represent exponentially decaying terms), see (\ref{eq:HamConphioVexp}), there exists, for every $\eta>0$, a
  constant $C_{\eta}$ such that
  \begin{align*}
    |\phi(\tau)| \leq & \sqrt{6}|\tau|+C_{\eta},\ \ \
    \ln\theta(\tau) \geq -(3-\eta)\tau-C_{\eta}
  \end{align*}
  for all $\tau\leq 0$, where we appealed to (\ref{eq:thetaprime}).
  Combining this observation with the assumption that $V\in\mfP^2_{\a_V}$, it follows that $\Omega$ decays exponentially. Note that 
  polynomials in the $N_{i}$ appearing in (\ref{seq:SpSmev}) and (\ref{eq:altqdef}) decay exponentially. Since $\Omega$ converges to zero exponentially,
  (\ref{eq:altqdef}) implies that $q-2$ converges to zero exponentially. Since $S_\pm$ also converge to zero exponentially, we also conclude that
  $\Sigma_\pm-\sigma_\pm$ decay exponentially. Moreover,
  \begin{equation}\label{eq:thetaasfoftau}
    \theta(\tau)=\exp\left(\textstyle{\int}_{\tau}^{\tau_{0}}[1+q(s)]ds\right)\theta(\tau_{0})=C_{0}e^{-3\tau}[1+O(e^{\e\tau})]
  \end{equation}
  for some constants $C_{0},\e>0$, where $\tau_{0}$ is an element of the existence interval of the solution. At this point, we can fix the translation
  ambiguity in the definition of $\tau$ in order to ensure that
  \begin{equation}\label{eq:normthetaas}
    \theta(\tau)=e^{-3\tau}[1+O(e^{\e\tau})].
  \end{equation}
  Since $t\downarrow 0$ corresponds to $\tau\rightarrow-\infty$, we conclude that
  \begin{equation}\label{eq:ttauas}
    t(\tau)=\textstyle{\int}_{-\infty}^{\tau}3e^{3s}[1+O(e^{\e s})]ds=e^{3\tau}[1+O(e^{\e\tau})].
  \end{equation}
  In particular,
  \begin{equation}\label{eq:tthetalimit}
    t\theta=1+O(e^{\e\tau}).
  \end{equation}
  Next, let $m_{i}=\ln |\bn_{i}|$ if $\bn_{i}\neq 0$ and $m_{i}=0$ otherwise. Let $\e_i=\mathrm{sign}\bn_i$ if $\bn_i\neq 0$ and $\e_i=0$ otherwise.
  Note, for future reference, that $m_i$ and $\e_i$ are determined by the data on the singularity and the choice of frame $\{e_i\}$. In particular,
  they do not depend on the development (\ref{eq:g unique d on sing}). Define $\mu_{i}$ by
  \begin{equation}\label{eq:mukdef}
    \mu_{i}:=\ln\frac{a_{i}}{\theta a_{j}a_{k}}+m_{i},
  \end{equation}
  where $\{i,j,k\}=\{1,2,3\}$. Then $N_i=\e_i e^{\mu_i}$. Moreover,
  \[
    \mu_i'=q+6\Sigma_i.
  \]
  Next, compute
  \begin{equation}\label{eq:mui as exp}
    \mu_{i}=\ln\frac{\theta^{\ell_i}a_{i}}{\theta^{\ell_j}a_{j}\theta^{\ell_k}a_{k}}-2\ell_i\ln\theta+m_{i}
    =-\frac{1}{3}(2+6\Sigma_i)\ln\theta+\ln\frac{\theta^{\ell_i}a_{i}}{\theta^{\ell_j}a_{j}\theta^{\ell_k}a_{k}}+m_{i};
  \end{equation}
  note that $\ell_i=\Sigma_i+1/3$ due to (\ref{eq:Sigmai Spm def}) and that $\ell_1+\ell_2+\ell_3=1$. Due to (\ref{eq:normthetaas}), $\ln\theta+3\tau$
  decays exponentially. Finally, $\theta^{\ell_i}a_{i}$ converges to $1$. Combining these observations with (\ref{eq:mui as exp}) yields
  \begin{equation}\label{eq:mui asympt unique}
    \mu_i(\tau)=6(\Sigma_i(\tau)+1/3)\tau+m_i+\dots,
  \end{equation}
  where the dots signify terms that converge to zero as $\tau\rightarrow-\infty$. In particular, defining $\nu_i$ by (\ref{eq:Niitonui}), it follows that
  $\nu_i(\tau)\rightarrow 0$ as $\tau\rightarrow-\infty$. Next, $s_i:=\Sigma_i-\sigma_i\rightarrow 0$ as $\tau\rightarrow-\infty$.
  Defining $\varkappa$ by (\ref{eq:kappaidos}) with $\varkappa_\infty=0$, it is clear that $\varkappa(\tau)\rightarrow 0$ as $\tau\rightarrow-\infty$.
  Next, note that $\phi_1$ is bounded due to (\ref{eq:HamConphioVexp}) and the fact that the polynomial in the $N_i$ converges to zero exponentially.
  Since $2-q$ converges
  to zero exponentially, this means that the first term on the right hand side of (\ref{eq:phioneev}) converges to zero exponentially. By an
  argument similar to the proof of the fact that $\Omega$ converges to zero exponentially, it is clear that the second term on the right hand side
  of (\ref{eq:phioneev}) converges to zero exponentially. This means that $\phi_1$ converges to its limit $\bPhi_1$ exponentially. In particular,
  $\psi_1\rightarrow 0$ as $\tau\rightarrow-\infty$, where $\psi_1$ is defined by (\ref{eq:psi one def}) with $\Phi_1$ replaced by $\bPhi_1$. Moreover,
  integrating the exponential convergence of $\phi_1$ to $\bPhi_1$ yields the conclusion that $\phi_0-\bPhi_1\tau-\bPhi_0$ converges to zero
  exponentially for some $\bPhi_0$. However, since $\bPhi_1=3\Phi_1$ and $\theta+3\tau$ decays exponentially, it follows that
  $\phi_0+\Phi_1\ln\theta-\bPhi_0$ converges to zero. Since $\phi_t/\theta-\Phi_1$ decays exponentially, this means that $\bPhi_0=\Phi_0$. Moreover,
  $\psi_0$ defined by (\ref{eq:phiidos}) converges to zero. Thus Proposition~\ref{prop:variableexandunique} applies, so that $\nu_i$, $s_\pm$,
  $\varkappa$, $\psi_0$ and $\psi_1$ are thus uniquely determined. Note also, for future
  reference, that $\Pi$ appearing in (\ref{eq:Pi def}) is determined by the data on the singularity and the choice of frame $\{e_i\}$. In particular,
  it does not depend on the development converging to the data on the singularity. Next, note that since $\varkappa$ is uniquely determined, $\theta$
  is uniquely determined as a function of $\tau$. Combining this observation with (\ref{eq:dtdtau}) and the fact that $\tau\rightarrow-\infty$
  corresponds to $t\rightarrow 0$, it is clear that $t$ is uniquely determined as a function of $\tau$. Next, since the $s_i$ (or, equivalently, the
  $s_\pm$) are uniquely determined, it follows that the $\Sigma_i$ are uniquely determined. Combining this observation with the fact that the
  $\nu_i$ are uniquely determined, it follows that the $\mu_i$ are uniquely determined. Due to (\ref{eq:mukdef}), it follows that the $a_i$ are
  uniquely determined. Thus
  \begin{equation}\label{eq:gdiagonalSH}
    g=-dt\otimes dt+\textstyle{\sum}_{i}a_{i}^{2}(t)\eta^{i}\otimes \eta^{i},
  \end{equation}
  where $\{\eta^i\}$ is the frame dual to $\{\fbar_i\}$ and the $a_i$ are uniquely determined by $\Pi$ appearing in (\ref{eq:Pi def}), which, in its turn,
  is determined by the data on the singularity and the choice of frame $\{e_i\}$. Similarly, the scalar field is uniquely determined by $\Pi$. 
    
  Next, we wish to prove that if the data at the singularity are isotropic or LRS, then so is the development. Assume, to begin with, that the
  data on the singularity are isotropic. Then $\sigma_i=0$, all the $\e_i$ are equal and all the $m_i$ are equal. Thus all the $N_i$ equal and
  all the $\Sigma_i$ equal (since $\eta_\lambda(\Pi)=\Pi$ for any $\lambda\in S_3$, using the notation of Proposition~\ref{prop:variableexandunique}, the
  $N_i$ and $\Sigma_i$ are invariant under permutations). Since the $\Sigma_i$ sum up to $0$, this means that all the $\Sigma_i$ vanish. Next, due to
  Proposition~\ref{prop:variableexandunique}, all the $\nu_i$ equal. Thus, all the $\mu_i-m_i$ are identically equal for the entire solution.
  Recalling (\ref{eq:mukdef}), this means that $a_i/(a_ja_k)$, where $\{i,j,k\}=\{1,2,3\}$, is independent of $i$. This means that $a_1=a_2=a_3$.
  Letting $a(t):=a_1(t)$, it follows that
  \begin{equation}\label{eq:g iso as form}
    g=-dt\otimes dt+a^2(t)\msH.
  \end{equation}
  This proves uniqueness in the isotropic setting, since $a$ is uniquely determined by the data on the singularity. In addition, the above observations
  justify the first statement of Remark~\ref{remark:form of dev of dos}.
  
  In the LRS setting, $\bn_2=\bn_3$ (so that $\e_2=\e_3$, $m_2=m_3$) and $\sigma_2=\sigma_3$. By the uniqueness statement of
  Proposition~\ref{prop:variableexandunique}, it follows that $\nu_2=\nu_3$ and $\Sigma_2=\Sigma_3$ (so that $\Sm=0$). This means, in particular, that
  $\mu_2-m_2=\mu_3-m_3$, so that $a_2=a_3$. Thus the metric takes the form
  \begin{equation}\label{eq:g form LRS}
    g=-dt\otimes dt+a_1^2(t)\eta^1\otimes \eta^1+a_2^2(t)[\eta^2\otimes \eta^2+\eta^3\otimes \eta^3].
  \end{equation}
  Since $e_i=A_i^{\phantom{i}j}\fbar_j$, where $A=\mathrm{diag}(\pm 1,A_2)$ and $A_2$ is an orthogonal $2\times 2$-matrix, it follows that
  \begin{equation}\label{eq:g form LRS xi}
    g=-dt\otimes dt+a_1^2(t)\xi^1\otimes \xi^1+a_2^2(t)[\xi^2\otimes \xi^2+\xi^3\otimes \xi^3],
  \end{equation} 
  where $\{\xi^i\}$ is the basis dual to $\{e_i\}$ and the $a_i$ are uniquely determined by the data on the singularity. This proves uniqueness.  
  In addition, the above observations justify the second statement of Remark~\ref{remark:form of dev of dos}.

  Next, consider the case that $G$ is of Bianchi type VI${}_0$. Assume, moreover, that $\bn_2=-\bn_3$ and $p_2=p_3$. Then, due to
  Proposition~\ref{prop:variableexandunique}, we know that $\nu_2=\nu_3$ and that $\Sigma_2=\Sigma_3$; note that $\eta_\lambda^-(\Pi)=\Pi$, if
  $\lambda\in S_3$ is such that $\lambda(1)=1$, $\lambda(2)=3$ and $\lambda(3)=2$. This means that $\mu_2-m_2=\mu_3-m_3$. Due to (\ref{eq:mukdef}),
  this means that $a_2=a_3$. Since we, as noted above, can assume $\fbar_i=\pm e_i$, it follows that the metric takes the form
  (\ref{eq:g form LRS xi}), where the $a_i$ are uniquely determined by the data on the singularity. This proves uniqueness. In addition, the above
  observations justify the third statement of Remark~\ref{remark:form of dev of dos}. This proves uniqueness. 

  In the remaining cases, the $a_i$ are uniquely determined by the data on the singularity. Moreover, $\fbar_i=\pm e_i$. This means that the
  metric is uniquely determined. 
  
  \textbf{Existence.}
  It remains to prove existence. Fix $\{e_{i}\}$, $p_{i}$, $o_{i}$, $\sigma_i$ and $\sigma_{\pm}$ as at the beginning of the proof. Then, with $\Phi_{0}$ and
  $\Phi_{1}$ as in the statement of the theorem and $\bPhi_1$ defined as at the beginning of the proof, $\sigma_{+}^{2}+\sigma_{-}^{2}+\bPhi_{1}^{2}/6=1$.
  Next, let $\e_{i}$ equal zero if $o_{i}=0$ and
  $\e_{i}=o_{i}/|o_{i}|$ if $o_{i}\neq 0$. Let $\varkappa_{\infty}=0$. Finally, let $m_{k}=\ln |o_{k}|$ if $o_{k}\neq 0$ and $m_{k}=0$ otherwise. Note
  that there are two possibilities: either $p_i>0$ whenever $\e_{i}\neq 0$; or $\sigma_+=-1$, $\sigma_-=0$, $\Phi_1=0$, $m_2=m_3$, $\e_2=\e_3$
  and all the $\e_i$ are non-vanishing. The conditions of Proposition~\ref{prop:variableexandunique} are thus satisfied. Applying
  this proposition yields a unique solution to (\ref{eq:thetaprime})--(\ref{seq:phizphioev}) such that the statements of
  Proposition~\ref{prop:variableexandunique} hold; note, in particular, that $\Nt=\Nth$ and $\Sm=0$ in case $\sigma_+=-1$ and all the $\e_i$ are
  non-zero. Define $\theta_{i}:=\theta(\Sigma_{i}+1/3)$, define the time coordinate $t$ by (\ref{eq:dtdtau}) and the requirement that
  $\tau\rightarrow-\infty$ corresponds to $t\downarrow 0$. Define $a_{i}$, up to a constant, by $\d_{t}\ln a_{i}=\theta_{i}$. Due to
  Proposition~\ref{prop:variableexandunique}, we know that $\theta_{i}/\theta$ converges to $p_{i}$. Due to the assumptions, the polynomial in the
  $N_{i}$ appearing on the right hand side of (\ref{eq:altqdef}) converges to zero exponentially. Moreover, $\Omega$ also converges to zero
  exponentially. Due to (\ref{eq:altqdef}), this means that $q-2$ converges
  to zero exponentially. Combining this observation with (\ref{eq:thetaprime}) and the fact that $\varkappa_{\infty}=0$, it follows that
  (\ref{eq:normthetaas}) holds. Since $t\downarrow 0$ corresponds to $\tau\rightarrow-\infty$, we conclude that (\ref{eq:ttauas}) and
  (\ref{eq:tthetalimit}) hold. Next, since $q-2$ and the $S_\pm$ converge to zero exponentially, it follows that $\Sigma_{i}$ converges exponentially
  to its limit, so that $\theta_{i}/\theta$ converges exponentially to $p_{i}$. Since $\d_{\tau}\ln a_{i}=3\theta_{i}/\theta$, we conclude that
  $\ln a_{i}=3p_{i}\tau+\alpha_{i}+O(e^{\e\tau})$ for some constants $\alpha_{i}$ and $\e>0$. Fix the ambiguity in the definition of $a_{i}$ so that
  $\ln a_{i}=3p_{i}\tau+O(e^{\e\tau})$. This means that $\ln a_{i}=p_{i}\ln t+O(t^{\e})$. At this stage, we can define $\phi$ by $\phi=\phi_{0}$ and
  $g$ by (\ref{eq:gdiagonalSH}) with $\eta^i$ replaced by $\xi^i$, where $\{\xi^i\}$ is the frame dual to $\{e_i\}$. By the assumed asymptotics, it
  follows that $\theta\rightarrow\infty$ as $t\downarrow 0$. The above observations also yield the conclusion that $(M,g,\phi)$ induces the desired
  data on the singularity.  By appealing to the arguments appearing in Subsections~\ref{ssection:EquationsBclassAdev} and \ref{ssection:whsuform}, it
  can be verified that the Einstein non-linear scalar field equations are satisfied; the details are left to the reader. Finally, it remains to prove
  that the development arises from regular initial data in the same symmetry class as the data on the singularity.

  Assume the data on the singularity to be isotropic. Then they are of Bianchi type I or IX and the development is of the same type. By the argument
  presented in the proof of uniqueness, the development is isotropic. Moreover, if the development is isotropic, then the data on the singularity have
  to be isotropic. If $\mfI_\infty\in\mS_{\mrVIz}^{\roper}$, the development arises from initial data in $\mB_{\mrVIz}^{\roper}[V]$ due to the argument presented
  in the proof of uniqueness. Next, assume the data on the singularity to be LRS. Then the development is LRS or isotropic due to the argument presented
  in the proof of uniqueness. It can only be isotropic in the case of Bianchi types I and IX (note that we have excluded isotropic and LRS initial data
  on the singularity in the case of Bianchi type VII${}_0$). Moreover, if it is isotropic, then the initial data on induced on the singularity are
  isotropic, which they are not. Finally, if $\mfs=\rogen$, then the development has to be of the same type, since it would otherwise induce data on
  the singularity with $\mfs\neq\rogen$. The theorem follows. 
\end{proof}

\section[Regularity, map to singularity]{Regularity of the map from developments to data on the singularity}\label{section:Reg map from dev to dos}

Next, we prove Proposition~\ref{prop:dev to data on sing nIX}.

\begin{proof}[Proof of Proposition~\ref{prop:dev to data on sing nIX}]
  Fix $\vartheta\in I_V$, let $\mfI\in {}^{\rosc}\mB_{\mfT}^{\mfs}[V](\vartheta)$ and assume $\md[V](\mfI)$ to induce data on the singularity. Since
  the conditions of Lemma~\ref{lemma:tautimecoord} are satisfied, the development corresponds to a solution to
  (\ref{seq:EFEwrtvar})--(\ref{eq:phiddot}) (see the proof of Proposition~\ref{prop:unique max dev}) which induces a global solution $S$ to
  (\ref{eq:thetaprime})--(\ref{eq:constraint}). Moreover, due to the uniqueness part of the proof of Theorem~\ref{thm:dataonsingtosolution}, it
  follows that $S$ is an admissible convergent solution to (\ref{eq:thetaprime})--(\ref{eq:constraint}); see
  Definition~\ref{def:admconvsol}. In order to justify this statement, note that $\phi_1$ and the $\Sigma_i$ converge due to the assumption that
  the development induces data on the singularity. In addition, (\ref{eq:normthetaas})
  holds, so that $\theta(\tau)\geq-3\tau-C_\theta$ for $\tau\leq 0$.  Next, assume that all the eigenvalues
  of $\msK$ are $<1$. Then, due to the proof of Theorem~\ref{thm:dataonsingtosolution}, all the $N_i$ converge to zero exponentially.
  In fact, if $\bar{\sigma}_i$ is the limit of $\Sigma_i$, then $\bar{\sigma}_i+1/3>0$ in case $N_i\neq 0$; see, again, the proof of
  Theorem~\ref{thm:dataonsingtosolution}. Finally, assume that there is an eigenvalue of $\msK$ which equals $1$. Then, due to the
  proof of Theorem~\ref{thm:dataonsingtosolution}, we can assume that $\Nt=\Nth$, that $\Sm=0$ and that $\Sp\rightarrow -1$. Note
  also that the limit of $\phi_1$ has to equal zero, since $\msK$ could otherwise not have an eigenvalue equal to $1$. If $\mfT=\mrVIII$,
  condition (ii) of Definition~\ref{def:admconvsol} is thus satisfied. Since Bianchi type I
  solutions clearly satisfy (i) of Definition~\ref{def:admconvsol}; since we have excluded isotropic and LRS Bianchi type VII${}_0$;
  and since Bianchi type VI${}_0$ is not consistent with local rotational symmetry, what remains to be considered is LRS Bianchi type
  II with $\No>0$. Since $\Sp$ converges to $-1$, $\No$ converges to zero exponentially and $\bar{\sigma}_1+1/3>0$. Thus condition
  (i) of Definition~\ref{def:admconvsol} is satisfied.
  
  Next, note that, due to Lemma~\ref{lemma:sfR mfT mfs}, Definition~\ref{def:B varthetaz} and Remark~\ref{remark:par iso id fixed mc}, we can identify
  ${}^{\rosc}\mfB_{\mfT}^{\mfs}[V](\vartheta)$ with the quotient $\sfR_\mfT^\mfs(\vartheta)=\sfB_{\mfT}^{\mfs}(\vartheta)/\Gamma_{\mfT}^\mfs$.
  Moreover, due to Lemma~\ref{lemma:sfRmfTmfsvartheta sm mfd}, $\sfR_\mfT^\mfs(\vartheta)$ is a smooth manifold. Next, $\sfF_\sfP^{-1}$, introduced in
  (\ref{eq:sfF sfP def}) defines a diffeomorphism from $\sfB_{\mfT}^{\mfs}(\vartheta)$ to $\sfP_\mfT^\mfs(\ln\vartheta)$;
  cf. Lemma~\ref{lemma:mfB mP rel}; Definitions~\ref{def:BTs GTs}, \ref{def:B varthetaz} and \ref{def:BTs LTs}; and the proof of Lemma~\ref{lemma:sfQ mfT mfs}.
  In addition, note that, given $\mfI\in {}^{\rosc}\mB_{\mfT}^{\mfs}[V](\vartheta)$, there is an
  $\eta\in \sfB_{\mfT}^{\mfs}(\vartheta)$ arising from $\mfI$ and that this correspondence gives rise to the identification between
  ${}^{\rosc}\mfB_{\mfT}^{\mfs}[V](\vartheta)$ and $\sfR_\mfT^\mfs(\vartheta)$; cf. the proof of Lemma~\ref{lemma:sc mfB mfT mfs param} and
  Remark~\ref{remark:eta BTs realised}. Given that $\mfI_0$ and $\eta_0$ are related in this way, let $\xi_0:=\sfF_\sfP^{-1}(\eta_0)$. Then the first
  part of the proof yields the conclusion that $\xi_0\in \sfA_\mfT^\mfs(\ln\vartheta)$ in case $\mD[V](\mfI_0)$ induces data on the singularity; cf.
  Definitions~\ref{def:ma mfT mfs} and \ref{def:sfA fixed varkappa}. Due to Lemma~\ref{lemma:reg of B}, there is thus an open neighbourhood of
  $\xi_0$ in $\sfP_\mfT^\mfs(\ln\vartheta)$, say $U$, contained in $\sfA_\mfT^\mfs(\ln\vartheta)$. Due to Lemma~\ref{lemma:from adm to dos}, it follows
  that if $\mfI\in{}^{\rosc}\mB_{\mfT}^{\mfs}[V](\vartheta)$ and $\eta\in\sfF_\sfP(U)$ arises from $\mfI$, then $\mD[V](\mfI)$ 
  induces data on the singularity. The subset of $\sfB_{\mfT}^{\mfs}(\vartheta)$ that gives rise to data on the singularity, say $W$, is
  therefore open. Since the corresponding subset of ${}^{\rosc}\mfB_{\mfT}^{\mfs}[V](\vartheta)$ is a quotient of $W$, we conclude that
  ${}^{\rosc}_{\roqu}\mfB_{\mfT}^{\mfs}[V](\vartheta)$ is an open subset of ${}^{\rosc}\mfB_{\mfT}^{\mfs}[V](\vartheta)$.
  
  That $\mfR_{\mfT,\mfs}^{\vartheta}$ is a local diffeomorphism follows from Lemma~\ref{lemma:local repr map to sing}; due to
  Lemma~\ref{lemma:BianchiAdevelopment}, $\theta_t<0$ for all $t$ in the developments of interest here, so that the condition $q>-1$ is always
  fulfilled. In order to prove that it is a diffeomorphism onto its image, it is sufficient to prove injectivity. Assume, to this end, that
  $\mfR_{\mfT,\mfs}^{\vartheta}(x_1)=\mfR_{\mfT,\mfs}^{\vartheta}(x_2)$. This means
  that if $[\mfI_i]=x_i$, $i=1,2$, then the data induced on the singularity by $\md[V](\mfI_i)$ are isometric. Denote the isometry between
  the data on the singularity by $\chi$. Due to Proposition~\ref{prop:iso idos to iso dev}, $\chi\times\roId$ is then an isometry between
  the developments $\md[V](\mfI_i)$. In particular, $\chi$ induces an isometry between $\mfI_i$, $i=1,2$, since the mean curvatures of $\mfI_i$
  coincide and $\theta_t<0$ for all $t$. This means that $x_1=x_2$ and the global injectivity of $\mfR_{\mfT,\mfs}^{\vartheta}$ follows.

  Next, assume that (\ref{eq:iota vthonexo eq iota vthtwoextwo}) holds with $\vartheta_2\geq\vartheta_1$. Then, since $\theta\rightarrow\infty$ to
  the past and $\theta_t<0$ for the developments under consideration here, see Lemma~\ref{lemma:BianchiAdevelopment}, a development corresponding
  to the isometry class of developments induced by $x_1$ has a hypersurface of the form $G_1\times\{t\}$ with mean curvature $\vartheta_2$. Call the
  corresponding isometry class of initial data $x_3$. Then $\iota_{\vartheta_2}(x_3)=\iota_{\vartheta_2}(x_2)$, so that $x_2=x_3$ due to
  Lemma~\ref{lemma:dev smo str nisoI nIX}. Thus $\mfR_{\mfT,\mfs}^{\vartheta_1}(x_1)=\mfR_{\mfT,\mfs}^{\vartheta_2}(x_2)$. The surjectivity of
  $\mfR_{\mfT,\mfs}$ follows from Theorem~\ref{thm:dataonsingtosolution}. That ${}^{\rosc}_{\roqu}\mfD_\mfT^\mfs[V]$ is an open subset of ${}^{\rosc}\mfD_\mfT^\mfs[V]$
  and that $\mfR_{\mfT,\mfs}$ is a local diffeomorphism follows from Lemma~\ref{lemma:dev smo str nisoI nIX}. In order to prove injectivity, assume
  that $\mfR_{\mfT,\mfs}(y_1)=\mfR_{\mfT,\mfs}(y_2)$. We can assume that $y_i=\iota_{\vartheta_i}(x_i)$ and that $\vartheta_2\geq \vartheta_1$. Then
  $\mfR_{\mfT,\mfs}^{\vartheta_1}(x_1)=\mfR_{\mfT,\mfs}^{\vartheta_2}(x_2)$. By arguments similar to the above, this equality can be used to conclude that
  $y_1=y_2$. The proposition follows. 
\end{proof}

Next, we prove Proposition~\ref{prop:dev to dos BIX}. 
\begin{proof}[Proof of Proposition~\ref{prop:dev to dos BIX}]
  The proof is similar to the proof of Proposition~\ref{prop:dev to data on sing nIX}. However, there are a few slightly subtle differences. For that
  reason, we provide a complete argument. Fix $\vartheta>0$. Due to Lemma~\ref{lemma:sfR mfT mfs}, Definition~\ref{def:B varthetaz} and
  Remark~\ref{remark:par iso id fixed mc}, we can identify ${}^{\rosc}\mfB_{\mrIX}^{\mfs}[V](\vartheta)$ with the quotient
  $\sfR_{\mrIX}^\mfs(\vartheta)=\sfB_{\mrIX}^{\mfs}(\vartheta)/\Gamma_{\mrIX}^\mfs$. Moreover, due to Lemma~\ref{lemma:sfRmfTmfsvartheta sm mfd},
  $\sfR_{\mrIX}^\mfs(\vartheta)$ is a smooth manifold. Denote the subset of $\sfB_{\mrIX}^{\mfs}(\vartheta)$ which is projected to
  ${}^{\rosc}\mfB_{\mrIX,\mft}^{\mfs,\rond}[V](\vartheta)$ via this identification by $\sfB_{\mrIX,\mft}^{\mfs,\rond}(\vartheta)$. Next, $\sfF_\sfP$
  induces a diffeomorphism from $\sfP_{\mrIX}^\mfs(\ln \vartheta)$ to $\sfB_{\mrIX}^{\mfs}(\vartheta)$; cf. Lemma~\ref{lemma:mfB mP rel};
  Definitions~\ref{def:BTs GTs}, \ref{def:B varthetaz} and \ref{def:BTs LTs}; and the proof of Lemma~\ref{lemma:sfQ mfT mfs}.
  Let
  \[
    \sfP_{\mrIX,\mft}^{\mfs,\rond}(\ln \vartheta):=\sfF_\sfP^{-1}[\sfB_{\mrIX,\mft}^{\mfs,\rond}(\vartheta)].
  \]
  Due to Lemmas~\ref{lemma:BianchiAdevelopment} and \ref{lemma:Bianchi IX remainder} and Definition~\ref{def:plus ap pp nd},
  initial data in $\sfB_{\mrIX,\mft}^{\mfs,\rond}(\vartheta)$ correspond to developments such that the left end point of the existence interval
  is $0$; $\theta(t)\rightarrow\infty$ as $t\downarrow 0$; there is a $t_0>0$ with the property that $\theta(t)>0$ and $\theta_t(t)<0$ for
  $t\leq t_0$; and $\theta$ is not integrable on $(0,t_0)$. Moreover, $t_0$ can be assumed to be the initial time at which the initial data
  are specified. Next, since $\theta$ is not integrable on $(0,t_0)$, it is possible to introduce $\tau$ by the requirements that $\tau(t_0)=0$
  and that (\ref{eq:dtdtau}) hold. The range of $\tau$ then includes $(-\infty,0]$. Let
  $\mfI_0\in {}^{\rosc}\mB_{\mrIX,\mft}^{\mfs,\rond}[V](\vartheta)$ and $\eta_0\in \sfB_{\mrIX,\mft}^{\mfs,\rond}(\vartheta)$ be such that it arises from $\mfI_0$;
  cf. Remark~\ref{remark:eta BTs realised}. Note that this correspondence gives rise to the identification between
  ${}^{\rosc}\mfB_{\mrIX}^{\mfs}[V](\vartheta)$ and $\sfR_{\mrIX}^\mfs(\vartheta)$; cf. the proof of Lemma~\ref{lemma:sc mfB mfT mfs param}. Next, let
  $\xi_0:=\sfF_\sfP^{-1}(\eta_0)$. Assume $\mfI_0$ to induce data on the singularity. Then, due to the uniqueness part of the proof of
  Theorem~\ref{thm:dataonsingtosolution}, it follows that $\md[V](\mfI_0)$ corresponds to an admissible convergent solution to
  (\ref{eq:thetaprime})--(\ref{eq:constraint}); see Definition~\ref{def:admconvsol}. In order to justify this statement, note first that
  the existence interval (in $\tau$-time) is unbounded to the past. Moreover, $\phi_1$ and the $\Sigma_i$
  converge due to the assumption that the development induces data on the singularity. In addition, (\ref{eq:normthetaas}) holds,
  so that $\theta(\tau)\geq-3\tau-C_\theta$ for $\tau\leq T$.  Next, assume that all the eigenvalues
  of $\msK$ are $<1$. Then, due to the proof of Theorem~\ref{thm:dataonsingtosolution}, all the $N_i$ converge to zero exponentially.
  In fact, if $\bar{\sigma}_i$ is the limit of $\Sigma_i$, then $\bar{\sigma}_i+1/3>0$ in case $N_i\neq 0$; see, again, the proof of
  Theorem~\ref{thm:dataonsingtosolution}. Finally, assume that there is an eigenvalue of $\msK$ which equals $1$. Then, due to the
  proof of Theorem~\ref{thm:dataonsingtosolution}, we can assume that $\Nt=\Nth$, that $\Sm=0$ and that $\Sp\rightarrow -1$. Note also
  that the limit of $\phi_1$ has to equal zero, since $\msK$ could otherwise not have an eigenvalue equal to $1$. This means that (ii) of
  Definition~\ref{def:admconvsol} is satisfied. 

  Due to Lemma~\ref{lemma:reg of B}, there is an open neighbourhood of $\xi_0$ in $\sfP_{\mrIX}^\mfs(\ln\vartheta)$, say $U$, which leads to
  admissible solutions. Since ${}^{\rosc}\mfB_{\mrIX,\mft}^{\mfs,\rond}[V](\vartheta)$ is an open subset of ${}^{\rosc}\mfB_{\mrIX}^{\mfs}[V](\vartheta)$,
  see Lemma~\ref{lemma:PimfT well def}, we can assume that $U$ is contained in $\sfP_{\mrIX,\mft}^{\mfs,\rond}(\ln\vartheta)$.   
  Due to Lemma~\ref{lemma:from adm to dos}, it follows that if $\mfI\in {}^{\rosc}\mB_{\mrIX}^{\mfs}[V](\vartheta)$ and $\eta\in\sfF_\sfP(U)$
  arises from $\mfI$, then $\mD[V](\mfI)$ induces data on the singularity. The set of elements in $\sfB_{\mrIX,\mft}^{\mfs,\rond}(\vartheta)$ that
  give rise to data on the singularity, say $W$, is therefore an open subset of $\sfB_{\mrIX}^{\mfs}(\vartheta)$. Since the corresponding subset of
  ${}^{\rosc}\mfB_{\mrIX,\mft}^{\mfs,\rond}[V](\vartheta)$ is a quotient of $W$, we conclude that
  ${}^{\rosc}_{\roqu}\mfB_{\mrIX,\mft}^{\mfs,\rond}[V](\vartheta)$ is an open subset of ${}^{\rosc}\mfB_{\mrIX}^{\mfs}[V](\vartheta)$.
  
  That $\mfR_{\mrIX,\mfs,\mft}^{\vartheta}$ is a local diffeomorphism follows from Lemma~\ref{lemma:local repr map to sing}. In order to prove that
  it is a diffeomorphism onto its image, it is sufficient to prove injectivity. Assume, to this end, that
  $\mfR_{\mrIX,\mfs,\mft}^{\vartheta}(x_1)=\mfR_{\mrIX,\mfs,\mft}^{\vartheta}(x_2)$. This means
  that if $[\mfI_i]=x_i$, $i=1,2$, then the data induced on the singularity by $\md[V](\mfI_i)$ are isometric. Denote the isometry between
  the data on the singularity by $\chi$. Due to Proposition~\ref{prop:iso idos to iso dev}, $\chi\times\roId$ is then an isometry between
  the developments $\md[V](\mfI_i)$. In particular, $\chi$ induces an isometry between $\mfI_i$, $i=1,2$, since the mean curvatures of
  $\mfI_i$ coincide and $\theta_t<0$ for $t\leq t_0$. This means that $x_1=x_2$ and the global injectivity of $\mfR_{\mrIX,\mfs,\mft}^{\vartheta}$ follows.

  Next, assume that (\ref{eq:iota vthonexo eq iota vthtwoextwo IX}) holds with $\vartheta_2\geq\vartheta_1$. Then, since $\theta\rightarrow\infty$ to
  the past and $\theta_t<0$ for $t\leq t_0$ for the developments under consideration here, see Definition~\ref{def:plus ap pp nd}, a development
  corresponding to the isometry class of developments induced by $x_1$ has a hypersurface of the form $G_1\times\{t\}$ with mean curvature $\vartheta_2$
  and $t\leq t_0$. Call the corresponding isometry class of initial data $x_3$. Then $\iota_{\vartheta_2}(x_3)=\iota_{\vartheta_2}(x_2)$, so that $x_2=x_3$
  due to Lemma~\ref{lemma:dev smo str IX}. Thus $\mfR_{\mrIX,\mfs,\mft}^{\vartheta_1}(x_1)=\mfR_{\mrIX,\mfs,\mft}^{\vartheta_2}(x_2)$. In order to prove surjectivity
  in case $\mft\in\{\roap,\ropp\}$, note that, given data on the singularity, there is a corresponding development due to
  Theorem~\ref{thm:dataonsingtosolution}. Due to the proof of Theorem~\ref{thm:dataonsingtosolution}, we know that $\theta$ tends to infinity,
  that $V/\theta^2$ converges to zero and that $\bS/\theta^2$ converges to zero in the direction of the singularity. Moreover, we know that $q$
  converges to $2$, so that there is a $t_1$ with the property that $\theta_t(t)<0$ for $t\leq t_1$. Combining these observations with
  Definitions~\ref{definition:Bap Bp Bpp} and \ref{def:plus ap pp nd}, it follows that the data are in the image of $\mfR_{\mrIX,\mfs,\mft}$ for
  $\mft\in\{\roap,\ropp\}$. That ${}^{\rosc}_{\roqu}\mfD_{\mrIX,\mft}^{\mfs}[V]$ is an open subset of ${}^{\rosc}\mfD_{\mrIX,\mft}^\mfs[V]$
  and that $\mfR_{\mfT,\mfs,\mft}$ is a local diffeomorphism follows from Lemma~\ref{lemma:dev smo str IX}. In order to prove injectivity, assume
  that $\mfR_{\mrIX,\mfs,\mft}(y_1)=\mfR_{\mrIX,\mfs,\mft}(y_2)$. We can assume that $y_i=\iota_{\vartheta_i}(x_i)$ and that $\vartheta_2\geq \vartheta_1$. Then
  $\mfR_{\mrIX,\mfs,\mft}^{\vartheta_1}(x_1)=\mfR_{\mrIX,\mfs,\mft}^{\vartheta_2}(x_2)$. By arguments similar to the above, this equality can be used to conclude that
  $y_1=y_2$. The proposition follows. 
\end{proof}
Finally, we prove Proposition~\ref{prop:dev to dos iso BI}.
\begin{proof}[Proof of Proposition~\ref{prop:dev to dos iso BI}]
  The proof is again similar to the proof of Proposition~\ref{prop:dev to data on sing nIX}. However, there are, again, a few slightly subtle
  differences. For that reason, we provide a complete argument here. Fix $\vartheta>[3v_{\max}(V)]^{1/2}$. Due to Lemma~\ref{lemma:sfR mfT mfs},
  Definition~\ref{def:B varthetaz} and Remark~\ref{remark:par iso id fixed mc}, we can identify ${}^{\rosc}\mfB_{\mrI,\rond}^{\iso}[V](\vartheta)$ with
  the trivial quotient $\sfR_{\mrI,\rond}^{\iso}(\vartheta)=:\sfB_{\mrI,\rond}^{\iso}(\vartheta)$; recall that
  $\Gamma_{\mrI}^{\iso}=\{\roId\}$,  see Definition~\ref{def:BTs GTs}. Next note that the inverse of $\sfF_\sfP$, introduced in
  (\ref{eq:sfF sfP def}) defines a diffeomorphism from $\sfB_{\mrI,\rond}^{\iso}(\vartheta)$ to
  \[
    \sfP_{\mrI,\rond}^{\iso}(\ln\vartheta):=\sfF_\sfP^{-1}[\sfB_{\mrI,\rond}^{\iso}(\vartheta)];
  \]
  cf. Lemma~\ref{lemma:mfB mP rel}; Definitions~\ref{def:BTs GTs}, \ref{def:B varthetaz} and \ref{def:BTs LTs}; and the proof of
  Lemma~\ref{lemma:sfQ mfT mfs}. Next, assume that $\mfI_0\in {}^{\rosc}\mB_{\mrI,\rond}^{\iso}[V](\vartheta)$ are such that $\mD[V](\mfI_0)$ induces
  data on the singularity. This means that the mean curvature $\theta$ diverges to $\infty$ to the past. Moreover, the expansion normalised Weingarten
  map $\mK$ converges to $\roId/3$, so that, by the asymptotic constraint, $\phi_t^2/\theta^2$ has to converge to $2/3$. Combining
  this observation with (\ref{eq:thetad}), it follows that 
  $\theta_t/\theta^2\rightarrow -1$. From the latter observation, it follows that $\theta$ blows up in finite time to the past and
  that $\theta$ is not integrable to the past. Combining these observations with Lemma~\ref{lemma:BianchiAdevelopment}, it follows that
  $\theta$ is neither integrable to the future nor to the past. It is therefore
  possible to introduce $\tau$ by the requirement that if $t_0$ corresponds to the initial hypersurface, then $\tau(t_0)=0$ and $\tau$ satisfies
  (\ref{eq:dtdtau}). The range of $\tau$ then includes $(-\infty,0]$. Let $\eta_0\in \sfB_{\mrI,\rond}^{\iso}(\vartheta)$ be such that it arises from $\mfI_0$;
  cf. Remark~\ref{remark:eta BTs realised}. Finally, let $\xi_0:=\sfF_\sfP^{-1}(\eta_0)$. Due to the uniqueness part of the proof of
  Theorem~\ref{thm:dataonsingtosolution}, it follows that $\md[V](\mfI_0)$ corresponds to an admissible convergent solution to
  (\ref{eq:thetaprime})--(\ref{eq:constraint}); see Definition~\ref{def:admconvsol}. In order to justify this statement, note first that
  the existence interval (in $\tau$-time) is unbounded to the past. Moreover, $\phi_1$ and the $\Sigma_i$
  converge due to the assumption that the development induces data on the singularity. In addition, (\ref{eq:normthetaas}) holds,
  so that $\theta(\tau)\geq-3\tau-C_\theta$ for $\tau\leq T$. Finally, since all the $N_i$ vanish, it is clear that condition (i) of
  Definition~\ref{def:admconvsol} is satisfied. 

  Due to Lemma~\ref{lemma:reg of B}, there is an open neighbourhood of $\xi_0$ in $\sfP_{\mrI,\rond}^{\iso}(\ln\vartheta)$, say $U$, which leads to
  admissible solutions. Moreover, the solutions are such that $\phi_t^2/\theta^2$ converges to $2/3$. As above, this means that $\theta$
  diverges to $\infty$ to the past (in finite time) and that $\theta$ is not integrable to the past. Combining these observations with
  Lemma~\ref{lemma:from adm to dos}, it follows that if $\mfI\in{}^{\rosc}\mB_{\mrI,\rond}^{\iso}[V](\vartheta)$ and $\eta\in\sfF_\sfP(U)$ arises from
  $\mfI$, then $\mD[V](\mfI)$ induces data on the singularity. The subset of $\sfB_{\mrI,\rond}^{\iso}(\vartheta)$ that gives rise to data on the
  singularity, say $W$, is therefore open, so that ${}^{\rosc}_{\roqu}\mfB_{\mrI,\rond}^{\iso}[V](\vartheta)$ is an open subset of
  ${}^{\rosc}\mfB_{\mrI,\rond}^{\iso}[V](\vartheta)$.
  
  That $\mfR_{\mrI,\iso}^{\vartheta}$ is a local diffeomorphism follows from Lemma~\ref{lemma:local repr map to sing}, keeping
  Remark~\ref{remark:q big m one eq rond} in mind. In order to prove that
  it is a diffeomorphism onto its image, it is sufficient to prove injectivity. Assume, to this end, that
  $\mfR_{\mrI,\iso}^{\vartheta}(x_1)=\mfR_{\mrI,\iso}^{\vartheta}(x_2)$. This means
  that if $[\mfI_i]=x_i$, $i=1,2$, then the data induced on the singularity by $\md[V](\mfI_i)$ are isometric. Denote the isometry between
  the data on the singularity by $\chi$. Due to Proposition~\ref{prop:iso idos to iso dev}, $\chi\times\roId$ is then an isometry between
  the developments $\md[V](\mfI_i)$. In particular, $\chi$ induces an isometry between $\mfI_i$, $i=1,2$ (in order to obtain this conclusion,
  we use the monotonicity of the mean curvature obtained in Lemma~\ref{lemma:bth large enough}). This means that $x_1=x_2$ and
  the global injectivity of $\mfR_{\mrI,\iso}^{\vartheta}$ follows.

  Next, assume that (\ref{eq:iota vthonexo eq iota vthtwoextwo I iso}) holds. Due to Lemma~\ref{lemma:bth large enough}, we know that
  $\theta_t\leq 0$ and that the zeros of $\theta_t$ are isolated. This means that there is a $\vartheta_3>\vartheta_i$, $i=1,2$, such
  that developments corresponding to the isometry classes of developments induced by the $x_i$ have hypersurfaces of the form $G_i\times\{t\}$
  with mean curvature $\vartheta_3$ and $\theta_t<0$ (which is equivalent to the normal derivative of the scalar field being non-zero). Call
  the corresponding isometry classes of initial data $x_{3,i}$. Then $\iota_{\vartheta_3}(x_{3,1})=\iota_{\vartheta_3}(x_{3,2})$, so that
  $x_{3,2}=x_{3,1}$ due to Lemma~\ref{lemma:dev smo str isoI}. Thus $\mfR_{\mrI,\iso}^{\vartheta_1}(x_1)=\mfR_{\mrI,\iso}^{\vartheta_2}(x_2)$.
  The surjectivity of $\mfR_{\mrI,\iso}$ follows from Theorem~\ref{thm:dataonsingtosolution}. That ${}^{\rosc}_{\roqu}\mfD_{\mrI,\roc}^{\iso}[V]$ is an open subset
  of ${}^{\rosc}\mfD_{\mrI,\roc}^{\iso}[V]$ and that $\mfR_{\mrI,\iso}$ is a local diffeomorphism follows from Lemma~\ref{lemma:dev smo str isoI}. In order
  to prove injectivity, assume that $\mfR_{\mrI,\iso}(y_1)=\mfR_{\mrI,\iso}(y_2)$. We can assume that $y_i=\iota_{\vartheta_i}(x_i)$ and that
  $\vartheta_2\geq \vartheta_1$. Then $\mfR_{\mrI,\iso}^{\vartheta_1}(x_1)=\mfR_{\mrI,\iso}^{\vartheta_2}(x_2)$. By arguments similar to the above, this equality
  can be used to conclude that $y_1=y_2$. The proposition follows. 
\end{proof}

\chapter{The $k=0$ and $k=-1$ FLRW solutions}\label{chapter:k eq zero and minus one cases}

In the present chapter, we discuss spatially homogeneous and isotropic solutions with vanishing or negative spatial curvature. 

\section{The isotropic Bianchi type I setting}

As is clear from Remark~\ref{remark:Exceptional Bianchi type I}, isotropic Bianchi type I solutions need not have a crushing singularity. However, in the present section,
we provide conditions on $V$ ensuring that this behaviour is non-generic. We also derive asymptotics that can be combined with \cite{GPR} in order to
yield stability results. Due to Remark~\ref{remark:Exceptional Bianchi type I}, it is clear that $\theta^{2}(t)=3V\circ\phi(t)$ is a problematic condition. The next
lemma illustrates that if $\theta$ is slightly bigger than this, a crushing singularity occurs. In the following results, we use the terminology introduced in
Definition~\ref{def:M plus}. 
\begin{lemma}\label{lemma: Bianchi I isotropic}
  Let $V\in C^{\infty}(\rn{})$ be non-negative, and let $\theta\in C^{\infty}(J,(0,\infty))$ and $\phi\in C^{\infty}(J,\rn{})$ be a solution to
  (\ref{seq:EFEwrtvar})--(\ref{eq:phiddot}) corresponding to initial data in $B^{\iso}_{\mrI,+}$, where $J$ is the maximal existence interval.
  Assume that there is a $t_{0}\in J$ and a $\delta>0$ such that $\theta^{2}(t)\geq 3(1+\delta)V\circ\phi(t)$ for all $t\leq t_0$. Then, after a translation
  in $t$, if necessary,
  $J=(0,\infty)$, $\theta(t)\rightarrow \infty$ as $t\downarrow 0$, $\theta\notin L^{1}(0,1)$ and $\theta\notin L^{1}(1,\infty)$. In particular, introducing
  $\tau$ according to (\ref{eq:dtdtau}), the range of $\tau$ is $\rn{}$.  

  Assume, in addition to the above, that $V\in\mfP_{\a_V}^0$ for some $\alpha_V\in (0,1)$. Let, moreover, $a_1=a_2=a_3=:a$, where $a_i$ is defined by
  (\ref{eq:for the ai}) and $\sigma_{ii}=0$ in the present setting. Then, if $\alpha_V<\delta/(1+\delta)$, there are constants $\Phi_0$,
  $\Phi_1=\pm (2/3)^{1/2}$, $\alpha>0$ and $C$ such that 
  \begin{equation}\label{eq:qms}
    |\theta^{1/3} a-\alpha|+|\phi+\theta^{-1}\phi_t\ln\theta-\Phi_0|+|\theta^{-1}\phi_{t}-\Phi_1|\leq C\ldr{\ln\theta}\theta^{-2(1-\alpha_V)}
  \end{equation}
  for all $t\leq t_0$. 
\end{lemma}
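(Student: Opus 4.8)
\textbf{Proof proposal for Lemma~\ref{lemma: Bianchi I isotropic}.}

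The plan is to treat the two parts separately and to reduce the second, quantitative part to the machinery of Proposition~\ref{prop:variableexandunique} via the adapted variables of Section~\ref{ssection:dataonsingexpnormvar}. For the first part, I would argue essentially as in the proof of Lemma~\ref{lemma:BianchiAdevelopment}. Since the solution is of isotropic Bianchi type I, (\ref{eq:thetad}) reads $\theta_t=-\tfrac32\phi_t^2\le 0$, and (\ref{eq:hamconfin}) gives $\theta^2=\tfrac32\phi_t^2+3V\circ\phi$. Combining these with the hypothesis $\theta^2\ge 3(1+\delta)V\circ\phi$ for $t\le t_0$ yields $\tfrac32\phi_t^2=\theta^2-3V\circ\phi\ge \tfrac{\delta}{1+\delta}\theta^2$, hence $\theta_t\le -\tfrac{\delta}{1+\delta}\theta^2<0$ for $t\le t_0$. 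This differential inequality forces $\theta$ to blow up in finite time to the past, so after a translation $J=(0,\infty)$ and $\theta(t)\to\infty$ as $t\downarrow 0$. Non-integrability to the past follows from $\theta_t\ge -\theta^2$ (valid since $V\ge 0$, cf. the argument around (\ref{eq:thetafutasnonint}) in the proof of Lemma~\ref{lemma:BianchiAdevelopment}), which gives $\theta(t)\ge\theta(1)\exp(-\int_1^t\theta)$; for this to be consistent with $\theta(t)\to\infty$ as $t\downarrow 0$ one needs $\theta\notin L^1(0,1)$. Non-integrability to the future is handled exactly as in Lemma~\ref{lemma:BianchiAdevelopment}. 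This gives that $\tau$ defined by (\ref{eq:dtdtau}) has range $\ro$.

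For the quantitative part, I would first upgrade the above to exponential control of $\Omega$. Since $\phi_t^2/\theta^2\ge\tfrac{2\delta}{3(1+\delta)}$ for $t\le t_0$, (\ref{eq:thetad}) gives $\theta_t\le -\tfrac{\delta}{1+\delta}\theta^2$, so after passing to $\tau$-time (where $\theta'=-(1+q)\theta$ and $q\in[-1,2]$) one gets that $\theta(\tau)$ grows at least like $e^{c\tau\cdot(-1)}$ for a positive $c$ depending only on $\delta$; more precisely $\ln\theta(\tau)\ge -3\tau + (\text{something})$, and in any case $\theta(\tau)^{-1}\le Ce^{(3-\eta)\tau}$ for $\tau\le 0$ with $\eta$ depending on $\delta$. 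Meanwhile, as in the proof of Lemma~\ref{lemma:BianchiAdevelopment} one shows $X_t\le -2X$, so $X(\tau)\gtrsim e^{-2\tau}$, hence $\theta(\tau)\gtrsim e^{-\tau}$, which combined with $|\phi'|\le\sqrt6$ (from the constraint, since the $N_i$ vanish) gives the coarse bound $|\phi(\tau)|\le\sqrt6|\tau|+C$ and therefore $\Omega(\tau)\le C e^{[6(1-\alpha_V)-\text{corr}]\tau}$. The sharper bound I actually need is $\Omega(\tau)\le Ce^{2(1-\alpha_V)\tau}$: here I would use that $\theta^2\ge 3(1+\delta)V\circ\phi$ directly, i.e. $\Omega=\tfrac92 V\circ\phi/\theta^2\le \tfrac{3}{2(1+\delta)}$, together with $V\in\mfP_{\alpha_V}^0$ and the growth of $\theta$; the condition $\alpha_V<\delta/(1+\delta)$ is exactly what is needed to feed the lower bound on $\theta$ forced by the $\delta$-hypothesis into the $\mfP_{\alpha_V}^0$-estimate and conclude that $\Omega$ decays exponentially in $\tau$, at rate close to $2(1-\alpha_V)$.

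Once $\Omega$ decays exponentially, the constraint (\ref{eq:constraint}) (which here reads $\Psi=1$, since $\Sigma_\pm=0$ and $N_i=0$) forces $\tfrac16\phi_1^2\to 1$, so $\phi_1\to\Phi_1=\pm(2/3)^{1/2}$; using (\ref{eq:altqdef}) we get $q-2=-2\Omega$ decays exponentially. Then (\ref{eq:phioneev}) shows $\phi_1'=-(2-q)\phi_1-9\theta^{-2}V'\circ\phi_0$ decays exponentially (the $V'$ term by the same $\mfP_{\alpha_V}^0$-type argument applied to $V'$, using that $V\in\mfP_{\alpha_V}^1$ is implied — or if only $\mfP^0$ is available, by interpolating; here I would simply note that $\mfP_{\alpha_V}^0$ with a one-sided bound suffices since $V\ge0$ and the coarse bound on $\phi$ controls $V'$ via the definition of $v_{\max}$/exponential growth — but most cleanly one invokes $V\in\mfP_{\alpha_V}^1$). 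Integrating gives $|\phi_1(\tau)-\Phi_1|\le Ce^{2(1-\alpha_V)\tau}$, hence $|\phi_0(\tau)-\Phi_1\tau-\Phi_0|\le C\langle\tau\rangle e^{2(1-\alpha_V)\tau}$ for some $\Phi_0$. Also $\theta'=-(1+q)\theta=-3\theta+2\Omega\theta$, so $\ln\theta+3\tau$ converges exponentially to a constant $\varkappa_\infty$, which after the usual translation normalization we may take to be $0$, giving $|\ln\theta(\tau)+3\tau|\le Ce^{2(1-\alpha_V)\tau}$; equivalently $t\theta=1+O(e^{2(1-\alpha_V)\tau})$ and $e^{-3\tau}/\theta\to 1$. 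Finally, $\theta_i=\theta/3$ here, so $a$ satisfies $\partial_\tau\ln a=\theta_i/\theta\cdot 3=1$... more precisely $\partial_t\ln a=\theta/3$ gives $\partial_\tau\ln a=1$, so $\ln a = \tau + \text{const} + O(e^{\cdot\tau})$, and $\ln(\theta^{1/3}a)=\tfrac13\ln\theta+\ln a=\tfrac13(-3\tau+O(e^{\cdot\tau}))+\tau+c+O(e^{\cdot\tau})=c+O(e^{2(1-\alpha_V)\tau})$; exponentiating gives $|\theta^{1/3}a-\alpha|\le Ce^{2(1-\alpha_V)\tau}$ for $\alpha=e^c$. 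Rewriting all three estimates in terms of $\theta$ using $e^{-3\tau}\asymp\theta$, i.e. $e^{\tau}\asymp\theta^{-1/3}$, converts $e^{2(1-\alpha_V)\tau}$ into $\theta^{-2(1-\alpha_V)/3}$...

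I should be careful about the exact exponent: the claimed rate in (\ref{eq:qms}) is $\theta^{-2(1-\alpha_V)}$, which corresponds to $e^{6(1-\alpha_V)\tau}$ in $\tau$-time, not $e^{2(1-\alpha_V)\tau}$. So in fact the $\Omega$-bound I should aim for is $\Omega(\tau)\le Ce^{6(1-\alpha_V)\tau}$ — which is the standard one from $\theta(\tau)\ge c e^{-3\tau}$ (i.e. (\ref{eq:normthetaas})-type behaviour) combined with $|\phi(\tau)|\le\sqrt6|\tau|+C$ and $V\in\mfP_{\alpha_V}^0$, exactly as in the proof of Lemma~\ref{lemma:X growth} and Lemma~\ref{lemma:BIBIIasympt}. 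The role of the hypothesis $\alpha_V<\delta/(1+\delta)$ is then to guarantee $6(1-\alpha_V)>6\cdot\tfrac{1}{1+\delta}$... no — rather, it guarantees that the coarse lower bound $\theta_t\le-\tfrac{\delta}{1+\delta}\theta^2$ is strong enough that $\theta$ genuinely blows up and $e^{-3\tau}\asymp\theta$ holds with good error; the exponent $2(1-\alpha_V)$ in (\ref{eq:qms}) then emerges from converting $e^{6(1-\alpha_V)\tau}$ via $e^{3\tau}\asymp\theta^{-1}$, and the extra logarithm $\langle\ln\theta\rangle$ absorbs the polynomial-in-$\tau$ factor from integrating $\phi_1'$ to control $\phi_0-\Phi_1\tau-\Phi_0$ (and then rewriting $\Phi_1\tau$ as $-\tfrac13\Phi_1\ln\theta+O(\cdot)$, which produces exactly the $\phi+\theta^{-1}\phi_t\ln\theta-\Phi_0$ combination in (\ref{eq:qms})). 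The main obstacle I anticipate is bookkeeping the exponents and the translation normalization consistently — in particular making sure the $\delta$-hypothesis is used in precisely the right place to get the lower bound on $\theta$ with no loss, and that the condition $\alpha_V<\delta/(1+\delta)$ is what makes $6(1-\alpha_V)$ beat the (weaker, $\delta$-dependent) growth rate of $\theta$ so that $\Omega$ and hence $q-2$, $\phi_1'$, $(\ln\theta+3\tau)'$ all decay and the errors in (\ref{eq:qms}) are genuinely $O(\langle\ln\theta\rangle\theta^{-2(1-\alpha_V)})$; the rest is a routine application of Proposition~\ref{prop:variableexandunique} with all $N_i=0$, $\sigma_i=0$, and the associated asymptotic identification.
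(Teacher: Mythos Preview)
Your first part is essentially the paper's argument: from the constraint and the hypothesis you correctly derive $\theta_t\le -\tfrac{\delta}{1+\delta}\theta^2$ on $(t_-,t_0]$, which forces finite-time blowup to the past; the non-integrability statements then follow from Lemma~\ref{lemma:BianchiAdevelopment}. One correction: the $X$-argument you invoke does \emph{not} apply here, since in the isotropic Bianchi type I setting $X=\tfrac32\sigma_{ij}\sigma^{ij}-\tfrac32\bS\equiv 0$; the lower bound on $\theta$ must come entirely from the $\delta$-hypothesis.

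For the quantitative part there is a genuine gap. Your plan to control $\phi_1'$ via (\ref{eq:phioneev}) requires bounding $\theta^{-2}V'\circ\phi_0$, which needs $V\in\mfP_{\a_V}^1$; the lemma only assumes $V\in\mfP_{\a_V}^0$, and the reduction to Proposition~\ref{prop:variableexandunique} needs $\mfP_{\a_V}^2$. The paper avoids derivatives of $V$ entirely: since $\Sigma_\pm=0$ and $N_i=0$, the constraint reads $\tfrac16\phi_\tau^2+\tfrac{3V\circ\phi}{\theta^2}=1$, so $|\phi_\tau^2-6|\le C\,\Omega$ directly. The argument is then a clean two-step bootstrap. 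First, the inequality $\theta_t\le -\tfrac{\delta}{1+\delta}\theta^2$ gives in $\tau$-time $\theta(\tau)\ge c\,e^{-\frac{3\delta}{1+\delta}\tau}$, and combining this with $|\phi(\tau)|\le\sqrt6|\tau|+C$ and $V\in\mfP_{\a_V}^0$ yields
\[
\tfrac{V\circ\phi(\tau)}{\theta^2(\tau)}\le C\exp\bigl[6\bigl(\tfrac{\delta}{1+\delta}-\a_V\bigr)\tau\bigr],
\]
which decays precisely because $\a_V<\tfrac{\delta}{1+\delta}$. Second, since $\theta_\tau=-\bigl(3-\tfrac{9V\circ\phi}{\theta^2}\bigr)\theta$ and the correction is now integrable, $e^{3\tau}\theta(\tau)\to\theta_\infty>0$; feeding this improved bound back gives the sharp rate $V\circ\phi/\theta^2\le Ce^{6(1-\a_V)\tau}$. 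The constraint then gives $|\phi_\tau^2-6|\le Ce^{6(1-\a_V)\tau}$, hence $|\theta^{-1}\phi_t-\Phi_1|\le C\theta^{-2(1-\a_V)}$ without touching $V'$, and the remaining estimates follow by integration (note $a(\tau)=a(0)e^\tau$ exactly, no error). Your final paragraph gropes toward this bootstrap but misidentifies the role of the $\delta$-condition and overcomplicates the route to $\phi_\tau\to\pm\sqrt6$.
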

\begin{remark}\label{remark:data induced on sing BI iso}
  Given that all the assumptions of the lemma are satisfied, the spacetime metric on the universal covering space can be written
  \[
    g=-dt\otimes dt+a^2(t)\bge_0,
  \]
  where $\bge_0$ is the standard flat metric on $\rn{3}$. Letting $\msH:=\a^2\bge_0$, $\msK=\mathrm{Id}/3$ and $\bge=a^2(t)\bge_0$, it then follows from
  the conclusions of the lemma that $(\rn{3},\msH,\msK,\Phi_0,\Phi_1)$ satisfy the conditions of Definition~\ref{def:ndvacidonbbssh} and that,
  for all $v,w\in T_p\rn{3}$ and all $p\in\rn{3}$,
  \[
    \lim_{t\downarrow 0}\bge(\theta^{\mK}v,\theta^{\mK}w)=\msH(v,w),\ \ \
    \mK=\msK,\ \ \
    \lim_{t\downarrow 0}(\theta^{-1}\phi_t)=\Phi_1,\ \ \
    \lim_{t\downarrow 0}(\phi+\theta^{-1}\phi_t\ln\theta)=\Phi_0,\ \ \
  \]
  where $\mK=\mathrm{Id}/3$. In other words, the development induces data on the singularity. 
\end{remark}
\begin{remark}
  In the isotropic setting, all the eigenvalues of the expansion normalised Weingarten map equal $1/3$. Imposing suitable conditions on the potential, the
  estimate (\ref{eq:qms}) can thus be used to prove that the development constructed in Proposition~\ref{prop:unique max dev} (after taking a quotient by a
  co-compact free and properly discontinuous subgroup of the isometry group of $(\rn{3},\bge_0)$) is a quiescent model solution in the sense of
  \cite[Definition~46]{GPR}. Moreover, for quiescent model solutions, stable big bang formation is demonstrated in \cite[Theorem~49]{GPR}.
\end{remark}
\begin{remark}\label{remark:V divided by theta sq to zero}
  Assuming $J=(t_-,\infty)$, that $\theta>0$ and that $V\circ\phi(t)/\theta(t)^{2}\rightarrow 0$ as $t\downarrow t_-$, any $\alpha_V<1$ is sufficient.
  In fact, if $\alpha_V<1$, choose $\delta$ large enough that $\alpha_V<\delta/(1+\delta)$. Choosing $t_0$ close enough to $t_-$, we then have
  $\theta^{2}(t)\geq 3(1+\delta)V\circ\phi(t)$ for all $t\leq t_0$.
\end{remark}
\begin{proof}
  By Lemma~\ref{lemma:BianchiAdevelopment}, $J=(t_-,\infty)$. Moreover, combining (\ref{eq:hamconfin}) with the assumptions yields
  \[
    \tfrac{3}{2}\phi_t^2+3V\circ\phi=\theta^{2}\geq 3(1+\delta)V\circ\phi
  \]
  on $J_-:=(t_-,t_0]$, so that $\phi_t^{2}\geq 2\delta V\circ\phi$ on $J_-$. Combining this observation with (\ref{eq:hamconfin}) again yields
  \[
    \theta^{2}=\tfrac{3}{2}\phi_t^2+3V\circ\phi\leq \tfrac{3}{2}(1+\delta^{-1})\phi_t^2.
  \]
  Due to this estimate and (\ref{eq:thetad}),
  \begin{equation}\label{eq:thetat estimate}
    \theta_t\leq -\tfrac{\delta}{1+\delta}\theta^{2}
  \end{equation}
  on $J_-$. Since $\theta>0$ on $J$, this means that $\theta$ explodes in finite time to the past. Since we already know that solutions exist globally
  to the future, see Lemma~\ref{lemma:BianchiAdevelopment}, we can assume that $J=(0,\infty)$. The statements concerning the lack of integrability
  of $\theta$ follow from Lemma~\ref{lemma:BianchiAdevelopment}.

  Next, note that $|\phi_\tau|\leq \sqrt{6}$ due to (\ref{eq:hamconfin}) and the non-negativity of $V$. Combining this estimate with the fact that
  $V\in\mfP_{\a_V}^0$ yields $|V(\phi(\tau))|\leq Ce^{6\alpha_V|\tau|}$ for all $\tau\leq 0$. On the other hand, due to (\ref{eq:thetat estimate}), there
  is a constant $c_a>0$ such that
  \[
    \theta(\tau)\geq c_a \exp\big(-\tfrac{3\delta}{1+\delta}\tau\big)
  \]
  for all $\tau\leq 0$. Combining the last two estimates yields
  \begin{equation}\label{eq:exp norm V est}
    \tfrac{|V(\phi(\tau))|}{\theta(\tau)^2}\leq C\exp\big[6\big(\tfrac{\delta}{1+\delta}-\alpha_V\big)\tau\big]
  \end{equation}
  for all $\tau\leq 0$. By assumption, the right hand side decays to zero exponentially as $\tau\rightarrow -\infty$. On the other hand, combining
  (\ref{eq:thetad}) and (\ref{eq:hamconfin}) yields
  \begin{equation}\label{eq: theta tau isotropic}
    \theta_\tau=-\left(3-\tfrac{9V\circ\phi}{\theta^{2}}\right)\theta.
  \end{equation}
  This means that $e^{3\tau}\theta(\tau)$ converges exponentially to a non-zero constant $\theta_\infty$ as $\tau\rightarrow -\infty$. Returning to the above arguments
  with this information at hand yields the following improvement of (\ref{eq:exp norm V est}):
  \begin{equation}\label{eq:exp norm V est improved}
    \tfrac{|V(\phi(\tau))|}{\theta(\tau)^2}\leq C\exp\left[6\left(1-\alpha_V\right)\tau\right]
  \end{equation}
  for all $\tau\leq 0$. Returning to (\ref{eq: theta tau isotropic}) then yields
  \begin{equation}\label{eq:theta asymp isotropic}
    |e^{3\tau}\theta(\tau)-\theta_\infty|\leq C\exp\left[6\left(1-\alpha_V\right)\tau\right]
  \end{equation}
  for all $\tau\leq 0$. Next, note that $a_\tau/a=1$. Thus $a(\tau)=a(0)e^{\tau}$. Thus
  \[
    |(\theta^{1/3}a)(\tau)-\theta_\infty^{1/3} a(0)|\leq C\exp\left[6\left(1-\alpha_V\right)\tau\right]\leq C[\theta(\tau)]^{-2(1-\alpha_V)}
  \]
  for all $\tau\leq 0$. Next, note that due to (\ref{eq:hamconfin}),
  \[
    |\phi_\tau^{2}(\tau)-6|\leq C\exp\left[6\left(1-\alpha_V\right)\tau\right]
  \]
  for all $\tau\leq 0$. Thus there is a $\Phi_{1}=\pm \sqrt{2/3}$ such that
  \[
    |\theta^{-1}\phi_{t}-\Phi_1|\leq C\theta^{-2(1-\alpha_V)}
  \]
  for all $\tau\leq 0$. Integrating this estimate yields a real number $\Phi_0$ such that
  \[
    |\phi+\Phi_1\ln\theta-\Phi_0|\leq C\theta^{-2(1-\alpha_V)}
  \]
  for all $\tau\leq 0$. The lemma follows. 
\end{proof}

Next, we discuss the case of a bounded potential.
\begin{cor}\label{cor: Bianchi I isotropic V bd}
  Let $V\in C^{\infty}(\rn{})$ be non-negative and bounded, and let $\theta\in C^{\infty}(J,(0,\infty))$ and $\phi\in C^{\infty}(J,\rn{})$ be a solution to
  (\ref{seq:EFEwrtvar})--(\ref{eq:phiddot}) corresponding to initial data in $B^{\iso}_{\mrI,+}$, where $J$ is the maximal existence interval. Then
  there are two possibilities. Either all the conclusions of Lemma~\ref{lemma: Bianchi I isotropic} hold; or $J=\ro$, $\theta(t)\leq [3\sup_sV(s)]^{1/2}$
  for all $t$ and there is a $\theta_\infty\in (0,\infty)$ such that $\theta(t)\rightarrow\theta_\infty$ as $t\rightarrow-\infty$
\end{cor}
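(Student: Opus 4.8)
The dichotomy I want to establish is between the case where $\theta$ eventually exceeds $3(1+\delta)V\circ\phi$ for some $\delta>0$ (forcing a crushing singularity via Lemma~\ref{lemma: Bianchi I isotropic}) and the case where $\theta$ stays bounded and converges to a positive limit to the past. The starting point is that, by Lemma~\ref{lemma:BianchiAdevelopment}, the existence interval is either $\ro$ or $(0,\infty)$, that $\theta>0$ throughout, and that $\theta_t\leq 0$ in the isotropic Bianchi type I setting (indeed $\theta_t=-\tfrac32\phi_t^2$, see (\ref{eq:thetad})). So $\theta$ is monotonically non-increasing and bounded above by its value at any fixed $t_0$; the only question is its behaviour to the past. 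Since $V$ is bounded, write $V_{\sup}:=\sup_sV(s)$ and note $3V\circ\phi(t)\leq 3V_{\sup}$ for all $t$.

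\textbf{Key steps.} First I would consider the case where there exists $t_0\in J$ with $\theta^2(t_0)>3V_{\sup}$ (equivalently $\theta(t_0)>[3\sup_sV]^{1/2}$). Since $\theta$ is monotone non-increasing to the past, $\theta^2(t)\geq\theta^2(t_0)>3V_{\sup}\geq 3V\circ\phi(t)$ for all $t\leq t_0$; choosing $\delta>0$ so that $\theta^2(t_0)\geq 3(1+\delta)V_{\sup}\geq 3(1+\delta)V\circ\phi(t)$ for all $t\leq t_0$, all the hypotheses of Lemma~\ref{lemma: Bianchi I isotropic} are met, so all its conclusions hold: $J=(0,\infty)$ after translation, $\theta\to\infty$ as $t\downarrow 0$, and the non-integrability statements. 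Second, I would treat the complementary case: $\theta^2(t)\leq 3V_{\sup}$ for all $t\in J$, i.e. $\theta$ is uniformly bounded. Then, by Remark~\ref{remark:improved dichotomy} (the improved dichotomy valid since $V\geq 0$), boundedness of $\theta$ forces $J=\ro$. Monotonicity and boundedness of $\theta$ give a limit $\theta_\infty:=\lim_{t\to-\infty}\theta(t)\in[0,3V_{\sup}]^{1/2}$. It remains only to show $\theta_\infty>0$. Here I would argue by contradiction: if $\theta_\infty=0$, then combining (\ref{eq:thetad}) and (\ref{eq:hamconfin}) gives $\theta_t=-\theta^2-\bS+3V\circ\phi=-\theta^2+3V\circ\phi$ (since $\bS=0$ for Bianchi type I), hence $\theta_t\geq -\theta^2$, so $\theta(t)\geq\theta(1)\exp(-\int_1^t\theta(s)\,ds)$ for $t\leq 1$; since $\theta$ is bounded this lower bound does not go to $0$ unless $\theta\notin L^1$ to the past, but boundedness plus convergence to $0$ is incompatible with $\theta\notin L^1((-\infty,0))$ only if\dots actually the cleaner route is: from $\theta_t\geq-\theta^2$ and $\theta>0$ we get $(1/\theta)_t=-\theta_t/\theta^2\leq 1$, so $1/\theta(t)\leq 1/\theta(t_0)+(t_0-t)$, which bounds $1/\theta$ linearly and is consistent with $\theta\to0$; so instead I use $\theta\notin L^1$: by Lemma~\ref{lemma:BianchiAdevelopment}, in the isotropic Bianchi type I case with $J=\ro$ we have $\theta\notin L^1(-\infty,0)$, which is impossible if $\theta\to 0$ to the past together with the bound $1/\theta(t)\leq 1/\theta(t_0)+(t_0-t)$ giving $\theta(t)\geq 1/(1/\theta(t_0)+(t_0-t))$, and $\int^{0}_{-\infty}\tfrac{ds}{1/\theta(t_0)+t_0-s}=\infty$ is consistent — so that is not a contradiction either. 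The actual contradiction: if $\theta_\infty=0$, then since $\theta$ is monotone to $0$, the argument in Lemma~\ref{lemma:BianchiAdevelopment} (the case analysis showing $e^{\Theta}\theta$ is monotone with $\Theta(t)=\int_{t_0}^t\theta$) shows $e^{\Theta}\theta\to$ const $\geq 0$ as $t\to-\infty$, forcing $\Theta(t)\to-\infty$, hence $\theta\notin L^1(-\infty,0)$; but separately one shows $\phi_t\to 0$ (from $\phi_t^2$ integrable, which follows from $\theta_t=-\tfrac32\phi_t^2$ and $\theta$ bounded and monotone), then $\phi$ converges to a limit (finite or infinite) by the argument of \cite[Corollary~26.7]{stab}, so $V\circ\phi$ converges, and then from $\theta_t=-\theta^2+3V\circ\phi$ with $\theta\to0$ and $\theta_t\to 0$ we get $V\circ\phi\to 0$; feeding back, $\theta_t=-\theta^2+o(1)\cdot$... this forces $\theta$ to decrease at rate $-\theta^2$ which gives $\theta\to\infty$ in finite time, contradicting $J=\ro$. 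So $\theta_\infty>0$.

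\textbf{Main obstacle.} The delicate point is the final contradiction excluding $\theta_\infty=0$ in the bounded case — one must carefully combine the monotonicity structures already extracted in Lemmas~\ref{lemma:BianchiAdevelopment} and \ref{lemma:bth large enough} (integrability of $\phi_t^2$, convergence of $\phi$ to a possibly infinite limit, convergence of $V\circ\phi$, and the behaviour of $e^{\Theta}\theta$) to rule it out, rather than any single slick estimate. In fact the cleanest packaging is to invoke Lemma~\ref{lemma:bth large enough} directly: its proof already shows that in the isotropic Bianchi type I case $\theta$ must diverge to $\infty$ to the past unless one lands on the forbidden equilibrium situation, and the hypothesis here (initial data in $B^{\iso}_{\mrI,+}$, i.e. non-trivial) excludes the equilibrium; so if $\theta$ is bounded to the past we get a contradiction with $\theta\to\infty$ — hence $\theta$ bounded to the past is itself impossible \emph{unless} $\theta$ converges to a strictly positive limit, because the divergence conclusion of Lemma~\ref{lemma:bth large enough} was derived precisely by contradiction from the assumption that $\theta$ converges to some $\theta_-\in(0,\infty)$ failing. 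I would re-examine that proof: it shows the assumption ``$\theta\to\theta_-\in(0,\infty)$ to the past'' leads, via $V'\circ\phi\to 0$ and $\phi_t\to 0$, to $\theta\to[3V(s_0)]^{1/2}$ or $\theta\to(3v_{\infty,\pm})^{1/2}$, i.e. $\theta_-^2=3V\circ\phi$ in the limit — which is a consistent outcome, not a contradiction, when $\theta_-\leq[3v_{\max}(V)]^{1/2}$. So the correct statement is exactly the corollary: either $\theta$ diverges (hence $\theta^2>3V\circ\phi$ eventually, giving Lemma~\ref{lemma: Bianchi I isotropic}), or $\theta$ converges to $\theta_\infty>0$ with $\theta_\infty^2=3\lim V\circ\phi\leq 3V_{\sup}$. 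I would therefore structure the proof around Lemma~\ref{lemma:bth large enough}'s internal case analysis applied with $v_{\max}$ replaced by $V_{\sup}$ (legitimate since $V$ bounded gives $v_{\max}(V)\leq V_{\sup}$ by Corollary~\ref{cor:vmax geq sup} and trivially $\leq$ in the other direction is all we need), concluding $\theta_\infty>0$ directly.
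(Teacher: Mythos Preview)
Your overall dichotomy is exactly the paper's: split on whether $\theta$ ever exceeds $[3V_{\sup}]^{1/2}$. Your Case~1 is correct and in fact slightly cleaner than the paper's route --- the paper first shows $\phi_t^2$ is bounded below, hence $\theta_t\leq -\alpha<0$, hence $\theta\to\infty$, hence $V\circ\phi/\theta^2\to 0$, and only then invokes Lemma~\ref{lemma: Bianchi I isotropic} via Remark~\ref{remark:V divided by theta sq to zero}; you instead pick $\delta>0$ with $\theta^2(t_0)\geq 3(1+\delta)V_{\sup}$ and feed the hypothesis of Lemma~\ref{lemma: Bianchi I isotropic} directly, which works since $V$ bounded means $V\in\mfP_{\a_V}^0$ for every $\a_V>0$.

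The one genuine confusion is in Case~2, where you labour over excluding $\theta_\infty=0$. This is a non-issue caused by losing track of the monotonicity direction: $\theta_t=-\tfrac{3}{2}\phi_t^2\leq 0$ means $\theta$ is non-increasing in $t$, hence \emph{non-decreasing to the past}, so
\[
\theta_\infty=\lim_{t\to-\infty}\theta(t)\geq\theta(t_0)>0
\]
for any fixed $t_0\in J$. The paper's proof of Case~2 is accordingly one sentence. Your entire ``Main obstacle'' paragraph, with its attempted contradictions via $e^{\Theta}\theta$, Lemma~\ref{lemma:bth large enough}, and the convergence of $V\circ\phi$, is unnecessary --- and in places circular, since Lemma~\ref{lemma:bth large enough} assumes $V\in\mfP_{\ropar}$, which you do not have here. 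Drop it and the proof is complete.
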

\begin{proof}
  Assume that there is a $t_0\in J$ such that $\theta(t_0)>[3\sup_sV(s)]^{1/2}$. Since $\theta$ is increasing to the past, see
  (\ref{eq:thetad}), this means that $\theta(t)\geq\theta(t_0)>[3\sup_sV(s)]^{1/2}$ for all $t\leq t_0$. Combining this observation with
  (\ref{eq:hamconfin}) yields
  \[
  \tfrac{3}{2}\phi_t^2(t)=\theta^2(t)-3V\circ\phi(t)\geq \theta^2(t_0)-3\sup_sV(s)=:\alpha
  \]
  for all $t\leq t_0$, where the last equality defines the constant $\alpha>0$. Combining this observation with (\ref{eq:thetad}) yields
  the conclusion that $\theta_t(t)\leq -\a$ for all $t\leq t_0$. Assuming that the solution exists for all $t\leq t_0$, we conclude that
  $\theta(t)\rightarrow\infty$ as $t\rightarrow-\infty$. Since $V$ is bounded, this means that $V\circ\phi(t)/\theta^2(t)\rightarrow 0$
  as $t\rightarrow-\infty$. Combining this observation with Lemma~\ref{lemma: Bianchi I isotropic}, it follows that all the conclusions
  of Lemma~\ref{lemma: Bianchi I isotropic} hold; in particular, the solution does not exist for all $t\leq t_0$. We can thus assume that
  the solution does not exhibit past global existence. Thus $\theta$ blows up in finite time to the past; cf.
  Remark~\ref{remark:improved dichotomy}. This means that all the conclusion of Lemma~\ref{lemma: Bianchi I isotropic} hold.

  Next, assume that $\theta(t)\leq[3\sup_sV(s)]^{1/2}$ for all $t\in J$. Then $\theta$ converges to a finite number $\theta_\infty$ to the past
  due to (\ref{eq:thetad}). 
\end{proof}

By Lemma~\ref{lemma: Bianchi I isotropic}, the condition that $\theta^{2}(t)\geq 3(1+\delta)V\circ\phi(t)$ for all $t\leq t_0$ is sufficient
to guarantee the desired asymptotics, assuming a suitable bound on $V$. The advantage of this result is that it illustrates that while the equality
$\theta^{2}(t)=3V\circ\phi(t)$ is problematic (in that it is compatible with solutions that do not have a crushing singularity), a slightly bigger
mean curvature leads to a crushing singularity. The disadvantage is that the result is based on assumptions concerning the solution for all $t\leq t_0$.
For this reason, we next prove a result which is only based on assumptions concerning the initial data.

\begin{lemma}\label{lemma:open set yielding data on sing BI iso}
  Let $0\leq V\in\mfP_{\a_V}^0$ for some $\a_V\in (0,1)$. Fix a $\Phi_{1}=\pm \sqrt{2/3}$ and a $\Phi_0$. Then there is a constant $c_1>1$ depending
  continuously and exclusively on the constant $c_0$ (appearing in Definition~\ref{def:mfP a inf}), $\a_V$ and an upper bound on $|\Phi_0|$, such that
  the following holds: If
  \[
    \theta(t_0)> c_1\ \ and\ \
    \phi(t_0)=-\Phi_1 \ln\theta(t_0)+\Phi_0,
  \]
  then $\theta^{2}(t_0)-3V\circ\phi(t_0)>0$, so that 
  \begin{equation}\label{eq:phit tz iso}
    \phi_t(t_0):=\pm\big(\tfrac{2}{3}\big)^{1/2}\left(\theta^{2}(t_0)-3V\circ\phi(t_0)\right)^{1/2}
  \end{equation}
  is well defined, where the sign is chosen so that $\phi_t(t_0)$ and $\Phi_1$ have the same sign. Moreover, if $\theta\in C^{\infty}(J,(0,\infty))$ and
  $\phi\in C^{\infty}(J,\rn{})$ constitute the maximal isotropic Bianchi type I solution to (\ref{seq:EFEwrtvar})--(\ref{eq:phiddot}) corresponding to
  the initial data $\theta(t_0)$, $\phi(t_0)$ and $\phi_t(t_0)$, then 
  \begin{equation}\label{eq:limit V divided by theta squared}
    \lim_{t\downarrow t_-}\frac{V\circ\phi(t)}{\theta^{2}(t)}=0,
  \end{equation}
  where $t_-$ is the left endpoint of $J$. Finally, all the conclusions of Lemma~\ref{lemma: Bianchi I isotropic} hold. 
\end{lemma}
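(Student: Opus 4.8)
The plan is to prove Lemma~\ref{lemma:open set yielding data on sing BI iso} by reducing it, via a quantitative bound on the potential along the solution, to Lemma~\ref{lemma: Bianchi I isotropic} together with Remark~\ref{remark:V divided by theta sq to zero}. The key structural point is that in the isotropic Bianchi type~I setting the scalar field is, to leading order, a logarithm of the mean curvature, so that $V\circ\phi$ grows at most like a fixed power of $\theta$, while $\theta^2$ dominates this power provided the mean curvature starts large enough. First I would record the following a priori control: by the Hamiltonian constraint (\ref{eq:hamconfin}) and the non-negativity of $V$ one has $|\phi_\tau|\leq\sqrt6$, so along the solution $|\phi(\tau)-\phi(0)|\leq\sqrt6|\tau|$ for $\tau\leq0$; combining this with $\phi(t_0)=-\Phi_1\ln\theta(t_0)+\Phi_0$ and with the lower bound $\theta(\tau)\geq c_a e^{-3\tau+\dots}$ (which will hold a posteriori once we know a crushing singularity occurs) gives an estimate of the shape $|V\circ\phi(\tau)|\leq c_0\exp(\sqrt6\a_V|\phi(\tau)|)\leq C\theta(\tau)^{2\a_V+o(1)}$. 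The subtlety is that this uses information about the whole solution, not just the data, so the argument has to be run as a bootstrap/continuity argument starting from the initial time.

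The concrete steps would be as follows. Step~1: choose $c_1$. Since $V\in\mfP_{\a_V}^0$, $V(\phi(t_0))\leq c_0 e^{\sqrt6\a_V|\phi(t_0)|}\leq c_0 e^{\sqrt6\a_V(|\Phi_1|\ln\theta(t_0)+|\Phi_0|)}=c_0 e^{\sqrt6\a_V|\Phi_0|}\theta(t_0)^{2\a_V}$, using $\sqrt6|\Phi_1|=\sqrt6\sqrt{2/3}=2$. Hence $\theta^2(t_0)-3V\circ\phi(t_0)\geq \theta(t_0)^2-3c_0 e^{\sqrt6\a_V|\Phi_0|}\theta(t_0)^{2\a_V}$, which is positive as soon as $\theta(t_0)^{2(1-\a_V)}>3c_0 e^{\sqrt6\a_V|\Phi_0|}$, i.e. $\theta(t_0)>c_1$ for a $c_1$ depending continuously only on $c_0$, $\a_V$ and an upper bound on $|\Phi_0|$. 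This makes $\phi_t(t_0)$ in (\ref{eq:phit tz iso}) well defined and shows the chosen data lie in $B^{\iso}_{\mrI,+}$ (non-triviality: if $\phi_t(t_0)=0$ and $V'(\phi(t_0))=0$ then $\theta(t_0)^2=3V(\phi(t_0))$, excluded by our inequality). Step~2: a bootstrap. Let $J=(t_-,t_+)$ be the maximal existence interval and, working in $\tau$-time where it makes sense, define $T^*$ as the infimum of times $\tau_1\leq0$ such that $\theta^2(\tau)\geq 3(1+\delta)V\circ\phi(\tau)$ on $[\tau_1,0]$, for a fixed $\delta$ chosen (depending only on $\a_V$) so that $\a_V<\delta/(1+\delta)$; Step~1 plus continuity guarantee this set is non-empty. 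On $[T^*,0]$ the hypotheses of Lemma~\ref{lemma: Bianchi I isotropic} hold, so $\theta_\tau\leq-\tfrac{3\delta}{1+\delta}\theta$ there and $\theta(\tau)\geq c_a\exp(-\tfrac{3\delta}{1+\delta}\tau)$, which combined with $|V\circ\phi(\tau)|\leq c_0 e^{\sqrt6\a_V|\phi(\tau)|}$ and the logarithmic relation between $\phi$ and $\theta$ (which is propagated, to sufficient accuracy, by $|\phi_\tau|\leq\sqrt6$ and the already-proven growth of $\theta$) yields $V\circ\phi(\tau)/\theta^2(\tau)\leq C\exp[6(\tfrac{\delta}{1+\delta}-\a_V)\tau]\to0$, which is bounded by any preassigned $\eta>0$ for $\tau$ sufficiently negative, in particular by $\tfrac{1}{3(1+\delta)}$. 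Hence the defining inequality of $T^*$ cannot degenerate at $T^*$, i.e. $T^*$ coincides with the left endpoint of the existence interval; since $\theta$ then blows up in finite time to the past (by (\ref{eq:thetat estimate}) and $\theta>0$), we get $t_-=0$ after translation.

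Step~3: conclude. Once we know $\theta^2\geq 3(1+\delta)V\circ\phi$ on all of $(0,t_0]$ with $\a_V<\delta/(1+\delta)$, Lemma~\ref{lemma: Bianchi I isotropic} applies verbatim and gives $J=(0,\infty)$ after translation, $\theta\to\infty$ as $t\downarrow0$, $\theta\notin L^1(0,1)$, $\theta\notin L^1(1,\infty)$, the range of $\tau$ is $\ro$, and the asymptotic estimates (\ref{eq:qms}). Moreover the bootstrap estimate above is precisely (\ref{eq:limit V divided by theta squared}). This establishes all assertions of the lemma. The main obstacle I anticipate is making the bootstrap rigorous: one must be careful that the estimate $|V\circ\phi(\tau)|\leq C\theta(\tau)^{2\a_V}$ used inside $[T^*,0]$ does not secretly assume what is being proved — the clean way around this is to note that $|\phi_\tau|\leq\sqrt6$ holds unconditionally from the constraint, so $|\phi(\tau)|\leq\sqrt6|\tau|+|\phi(0)|$ is available throughout $J$, and then the lower bound on $\theta$ coming from (\ref{eq:thetat estimate}) on $[T^*,0]$ is the only solution-dependent input, and it is self-improving. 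A secondary technical point is checking that $c_1$ can be taken to depend \emph{continuously} on $(c_0,\a_V,|\Phi_0|)$, which is immediate from the explicit formula $c_1=(3c_0 e^{\sqrt6\a_V|\Phi_0|})^{1/(2(1-\a_V))}$ (any slightly larger continuous choice works), together with the fixed choice $\delta:=\delta(\a_V)$, e.g. $\delta=\tfrac{2\a_V}{1-\a_V}$, for which $\delta/(1+\delta)=\tfrac{2\a_V}{1+\a_V}>\a_V$ when $\a_V\in(0,1)$.
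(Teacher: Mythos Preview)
Your approach is correct in outline and very close to the paper's: both run a continuity (bootstrap) argument in $\tau$-time, starting from the largeness of $\theta(t_0)$, to conclude that $V\circ\phi/\theta^2$ stays small all the way to the singularity, and then invoke Lemma~\ref{lemma: Bianchi I isotropic}. The paper bootstraps on the lower bound $\ln\theta(\tau)\geq -3\tau+\ln\theta(0)-1$ and integrates $\partial_\tau\ln\theta=-3+9V\circ\phi/\theta^2$ to recover the bootstrap with margin (from $-1$ to $-1/2$), whereas you bootstrap directly on the ratio $\theta^2\geq 3(1+\delta)V\circ\phi$ and use the resulting pointwise inequality $\theta_\tau\leq-\tfrac{3\delta}{1+\delta}\theta$ to control $\theta$ from below. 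Both choices work and are essentially interchangeable.

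One point in your Step~2 needs tightening. Your derived estimate on $[T^*,0]$ is
\[
\frac{V\circ\phi(\tau)}{\theta^2(\tau)}\leq c_0\, e^{\sqrt{6}\a_V|\Phi_0|}\,\theta(0)^{2(\a_V-1)}\,\exp\!\Big[6\Big(\tfrac{\delta}{1+\delta}-\a_V\Big)\tau\Big],
\]
and since the exponent is positive and $\tau\leq 0$, the exponential factor is \emph{largest} near $\tau=0$, not for very negative $\tau$. So the bootstrap closes not because ``$\tau$ is sufficiently negative'' (which says nothing about $T^*$ if $T^*$ is close to $0$) but because the prefactor $c_0\,e^{\sqrt{6}\a_V|\Phi_0|}\theta(0)^{2(\a_V-1)}$ is already strictly below $\tfrac{1}{3(1+\delta)}$ once $\theta(0)>c_1$. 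This also means your $c_1$ in Step~1 must be chosen for this slightly stronger inequality, not merely $\theta^2>3V\circ\phi$; that is still a continuous choice in $(c_0,\a_V,|\Phi_0|)$ since $\delta=\delta(\a_V)$. With that correction your argument goes through. The paper's bootstrap avoids this subtlety because its bootstrap quantity is $\ln\theta$ itself, and the self-improvement comes directly from integrating the small remainder $9V\circ\phi/\theta^2$.
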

\begin{remark}
  The set of initial data satisfying the conditions is an open subset of $B^{\iso}_{\mrI,+}$.
\end{remark}
\begin{proof}
  Estimate, appealing to (\ref{eq:V k-derivatives exp estimate}), 
  \[
    \tfrac{V\circ\phi(t_0)}{\theta^{2}(t_0)}\leq c_0\exp\big(-2(1-\a_V)\ln\theta(t_0)+\sqrt{6}\a_V |\Phi_0|\big).
  \]
  Assuming $c_1$ to be large enough, the bound depending continuously and exclusively on $c_0$, $\a_V$ and an upper bound on $|\Phi_0|$, we can
  assume the right hand side to be as small as we wish.

  Next, we change time coordinate according to (\ref{eq:dtdtau}). Due to Lemma~\ref{lemma:BianchiAdevelopment}, the existence time in $\tau$-time
  is $\ro$. Define $\tau$ so that $\tau(t_0)=0$. Expressing everything with respect to $\tau$-time from now on (so that we write $\theta(0)$ instead
  of $\theta(t_0)$, for example), define
  \[
    \ma:=\{\tau\leq 0:\ln\theta(s)\geq -3s+\ln\theta(0)-1\ \forall s\in [\tau,0]\}.
  \]
  Clearly, $0\in\ma$. It is also clear that $\ma$ is closed and connected. It remains to prove that $\ma$ is open. Let $\tau_1\in\ma$.
  Since $|\phi_\tau|\leq \sqrt{6}$,
  \[
    |\phi(\tau)|\leq \sqrt{6}|\tau|+(2/3)^{1/2}\ln\theta(0)+|\Phi_0|
  \]
  for all $\tau\leq 0$. Combining this estimate with (\ref{eq:V k-derivatives exp estimate}) and the assumption that $\tau_1\in\ma$ yields
  \begin{equation}\label{eq:V estimate boot}
    \begin{split}
      \tfrac{V\circ\phi(\tau)}{\theta^{2}(\tau)} \leq & c_0\exp\left(6\tau-2\ln\theta(0)+2-6\a_V\tau+2\a_V\ln\theta(0)+\sqrt{6}\a_V|\Phi_0|\right)\\
      \leq & Ce^{6(1-\a_V)\tau}[\theta(0)]^{-2(1-\a_V)}
    \end{split}
  \end{equation}
  for all $\tau\in [\tau_1,0]$, where $C$ depends continuously and exclusively on $c_0$ and an upper bound on $|\Phi_0|$. Next, note that, due to
  (\ref{eq:thetad}) and (\ref{eq:hamconfin}), 
  \[
    \d_\tau \ln\theta=-\tfrac{9}{2\theta^2}\phi_t^2=-\tfrac{1}{2}\phi_\tau^2=-3+\tfrac{9V\circ\phi(\tau)}{\theta^{2}(\tau)}.
  \]
  Integrating this equality from $\tau_1$ to $0$, keeping (\ref{eq:V estimate boot}) in mind, yields
  \[
    \ln\theta(0)-\ln\theta(\tau_1)\leq 3\tau_1+C[\theta(0)]^{-2(1-\a_V)}.
  \]
  Assuming $c_1$ to be large enough, the bound depending continuously and exclusively on $c_0$, $\a_V$ and an upper bound on $|\Phi_0|$, it follows
  that $\ln\theta(\tau_1)\geq -3\tau_1+\ln\theta(0)-1/2$. This means that $\ma$ is open, so that $\ma=(-\infty,0]$.
  Next, fix $\delta>0$ large enough that $\a_V<\de/(1+\de)$. Then, due to (\ref{eq:V estimate boot}), there is a $t_0\in J$
  such that $\theta^2(t)\geq 3(1+\de)V\circ\phi(t)$ for all $t\leq t_0$. This means that all the conclusions of Lemma~\ref{lemma: Bianchi I isotropic}
  hold. The lemma follows. 
\end{proof}
Assume that $0\leq V\in\mfP_{\a_V}^0$ for some $\alpha_V\in (0,1)$. Then there are open conditions on initial data in $B^{\iso}_{\mrI,+}$ ensuring that the
conclusions of Lemmas~\ref{lemma: Bianchi I isotropic}
and \ref{lemma:open set yielding data on sing BI iso} hold. In particular, the corresponding solutions induce data on the singularity in the
sense of Definition~\ref{def:ndvacidonbbssh}; see Remark~\ref{remark:data induced on sing BI iso}. However, in order to obtain a more complete picture
of the dynamics, we here impose stronger conditions on the potential. To begin with, we characterise the case that the mean curvature remains bounded
to the past.

\begin{prop}\label{prop:iso theta bounded}
  Assume $V\in C^{\infty}(\rn{})$ to be non-negative and such that if $V'(s)=0$, then $V''(s)\neq 0$. Assume, moreover, that 
  $V(s)\rightarrow\infty$ as $|s|\rightarrow\infty$. Let $\theta\in C^{\infty}(J,(0,\infty))$ and $\phi\in C^{\infty}(J,\rn{})$ be a solution to
  (\ref{seq:EFEwrtvar})--(\ref{eq:phiddot}) corresponding to initial data in $B^{\iso}_{\mrI,+}$, where $J$ is the maximal existence
  interval. If $\theta$ is bounded, then $J=\rn{}$ and there are $\theta_\infty,\phi_\infty\in\rn{}$ such that
    \[
      \lim_{t\rightarrow-\infty}\theta(t)=\theta_\infty,\ \ \
      \lim_{t\rightarrow-\infty}\phi(t)=\phi_\infty,\ \ \
      \lim_{t\rightarrow-\infty}\phi_t(t)=0.
    \]
    Moreover, $V'(\phi_\infty)=0$, so that $(\theta_\infty,\phi_\infty,0)$ is a fixed point of the system. In addition, $V(\phi_\infty)>0$,
    $V''(\phi_\infty)<0$ and there are two solutions (up to time translations) in the set $B^{\iso}_{\mrI,+}$ converging to the fixed point, 
    each of which is a submanifold of $B^{\iso}_{\mrI,+}$.  
\end{prop}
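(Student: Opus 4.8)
The plan is to run a five-part argument, working throughout with the symmetry-reduced equations, which in the isotropic Bianchi type I case (all $\sigma_{ij}=0$, all $n_{ij}=0$) are $\theta_t=-\tfrac32\phi_t^2$, the Hamiltonian constraint $\theta^2=\tfrac32\phi_t^2+3V\circ\phi$, and $\phi_{tt}=-\theta\phi_t-V'\circ\phi$. First I would dispose of $J$ and the limit of $\theta$: since $\theta_t\le 0$ and $\theta>0$ (the solution lives in $C^\infty(J,(0,\infty))$), the boundedness of $\theta$ forces $J=\ro$ by Remark~\ref{remark:improved dichotomy}, and monotonicity plus boundedness give $\theta(t)\to\theta_\infty$ as $t\to-\infty$ with $\theta_\infty\ge\theta(t_1)>0$ for any $t_1\in J$.

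Second, I would obtain $\phi_t\to 0$ and $\phi\to\phi_\infty$. From $\theta_t=-\tfrac32\phi_t^2$ and the convergence of $\theta$ one gets $\phi_t^2\in L^1((-\infty,t_1])$. The constraint gives $V\circ\phi=\tfrac13\theta^2-\tfrac12\phi_t^2$, which is bounded; since $V(s)\to\infty$ as $|s|\to\infty$, $\phi$ stays in a fixed compact set, hence $V'\circ\phi$ is bounded and so is $\phi_{tt}=-\theta\phi_t-V'\circ\phi$. Then $(\phi_t^2)'=2\phi_t\phi_{tt}$ is bounded, so the nonnegative integrable function $\phi_t^2$ is uniformly continuous and therefore tends to $0$; thus $\phi_t\to 0$ and $V\circ\phi(t)=\tfrac13\theta^2-\tfrac12\phi_t^2\to\tfrac13\theta_\infty^2=:c$. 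Since $\theta_\infty>0$ we have $c>0$. The $\alpha$-limit set of $\phi$ is a nonempty, compact, connected subset of $V^{-1}(c)$; because the critical points of $V$ are nondegenerate (hence isolated) and $V\to\infty$, $V^{-1}(c)$ meets each compact set in a finite set (an accumulation point would be a zero of $V-c$ at which $V-c$ is either strictly monotone or has a nondegenerate critical point, both impossible for an accumulation point of $V^{-1}(c)$). Hence the $\alpha$-limit set is a single point $\{\phi_\infty\}$, so $\phi(t)\to\phi_\infty$ and $V(\phi_\infty)=c>0$, which already yields one of the final claims.

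Third, passing to the limit in $\phi_{tt}=-\theta\phi_t-V'\circ\phi$ gives $\phi_{tt}(t)\to-V'(\phi_\infty)$; a nonzero limit of $\phi_{tt}$ would make $\phi_t$ grow linearly, contradicting $\phi_t\to 0$, so $V'(\phi_\infty)=0$ and $(\theta_\infty,\phi_\infty,0)$ is a fixed point of the reduced system (consistent with $\theta_\infty^2=3V(\phi_\infty)$).

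Fourth --- the part I expect to be the main obstacle --- I would analyse the reduced flow near $\mathfrak{p}:=(\theta_\infty,\phi_\infty,0)$, which, note, is \emph{not} in $B^{\iso}_{\mrI,+}$ (it is trivial data). Because $V(\phi_\infty)>0$, the map $(\phi,\phi_t)\mapsto\theta=(\tfrac32\phi_t^2+3V(\phi))^{1/2}$ is smooth and positive near $(\phi_\infty,0)$, so $(\phi,\phi_t)$ is a chart on the two-dimensional constraint surface near $\mathfrak{p}$, in which the dynamics is $\dot\phi=\phi_t$, $\dot\phi_t=-\theta(\phi,\phi_t)\phi_t-V'(\phi)$ with linearisation $\bigl(\begin{smallmatrix}0&1\\-V''(\phi_\infty)&-\theta_\infty\end{smallmatrix}\bigr)$ and characteristic polynomial $\lambda^2+\theta_\infty\lambda+V''(\phi_\infty)$. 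By hypothesis $V''(\phi_\infty)\ne0$. If $V''(\phi_\infty)>0$ then both eigenvalues have negative real part, so $\mathfrak{p}$ is a hyperbolic sink for the forward flow and the only solution with $\alpha$-limit $\mathfrak{p}$ is the constant one; but that is trivial data, not in $B^{\iso}_{\mrI,+}$, while the solution at hand lies in $B^{\iso}_{\mrI,+}$ and converges to $\mathfrak{p}$ --- contradiction. Hence $V''(\phi_\infty)<0$; then the discriminant $\theta_\infty^2-4V''(\phi_\infty)$ is positive, the eigenvalues are real of opposite sign, and $\mathfrak{p}$ is a hyperbolic saddle with a one-dimensional unstable manifold $W^u$. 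Since $V''(\phi_\infty)\ne0$ makes $\phi_\infty$ the only zero of $V'$ near it, $\mathfrak{p}$ is the only trivial-data point near itself, so locally $B^{\iso}_{\mrI,+}$ is the constraint surface minus $\{\mathfrak{p}\}$; the solutions in $B^{\iso}_{\mrI,+}$ with $\alpha$-limit $\mathfrak{p}$ are therefore exactly those lying on $W^u\setminus\{\mathfrak{p}\}$, which has two connected components, each a single orbit and a one-dimensional submanifold of $B^{\iso}_{\mrI,+}$. This yields exactly two such solutions up to time translation and completes the proof. The delicate points are the correct choice of coordinates at a fixed point excluded from the state space and the careful use of hyperbolicity to simultaneously rule out $V''(\phi_\infty)>0$ and count the orbits; the other mildly technical ingredients are the ``integrable plus uniformly continuous implies vanishing'' lemma and the finiteness of the level set $V^{-1}(c)$, which genuinely relies on the nondegeneracy assumption on the critical points of $V$.
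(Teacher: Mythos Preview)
Your argument is correct and follows essentially the same route as the paper: reduce to the planar $(\phi,\phi_t)$ system via the constraint, show convergence to a fixed point, linearise, rule out $V''(\phi_\infty)>0$ because a sink admits no nontrivial backward-converging orbits, and read off the two half-orbits of the saddle's unstable manifold.

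One point deserves more care. You assert that each component of $W^u\setminus\{\mathfrak{p}\}$ is a one-dimensional submanifold of $B^{\iso}_{\mrI,+}$, but the unstable manifold theorem only gives you an \emph{embedded} local unstable manifold; the global orbit obtained by flowing forward is a priori only immersed. The paper closes this by observing that $\theta$ is strictly monotone along any nontrivial orbit: if $\theta_t(t_0)=0$ then $\phi_t(t_0)=0$, and if in addition $V'(\phi(t_0))=0$ the point would be a fixed point, contradicting membership in $B^{\iso}_{\mrI,+}$; hence $\phi_{tt}(t_0)\neq 0$ and $\theta_t<0$ in a punctured neighbourhood. Strict monotonicity of $\theta$ then gives injectivity of $t\mapsto(\theta,\phi,\phi_t)$ and continuity of the inverse, so the orbit is an embedded submanifold. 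Your $\alpha$-limit-set argument for the convergence of $\phi$ is a nice explicit substitute for the paper's terser appeal to the nondegeneracy of critical points.
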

\begin{remark}\label{remark:non crushing non generic}
  Since $V'$ has at most countably many zeros, the set of initial data corresponding to solutions with bounded mean curvature is contained in a
  countable union of codimension one submanifolds. In other words, the developments that do not have a crushing singularity correspond to a
  Baire and Lebesgue non-generic set of initial data. 
\end{remark}
\begin{proof}
  Since $\theta$ is bounded to the past, the solution exists globally to the past; cf. Remark~\ref{remark:improved dichotomy}. Since
  $V(s)\rightarrow\infty$ as $|s|\rightarrow\infty$, the equality (\ref{eq:hamconfin}) implies that $\phi$ and $\phi_t$ are bounded to the
  past. Moreover, $\phi_t^2$ is integrable to the past due to (\ref{eq:thetad}) and the fact that $\theta$ is bounded to the past. Due to the
  boundedness of $\theta$, $\phi_t$ and $\phi$, we conclude that $\phi_{tt}$ is bounded to the past, see
  (\ref{eq:phiddot}). Combining this observation with the integrability of $\phi_t^2$ to the past leads to the conclusion that $\phi_t$
  converges to zero. Since $\theta$ converges to a limit, say $\theta_\infty>0$ to the past, it follows that
  \begin{equation}\label{eq:Vphilimit}
    \lim_{t\rightarrow-\infty}V[\phi(t)]=\theta_\infty^{2}/3.
  \end{equation}
  Combining this observation with the fact that $\phi$ is bounded and the fact that $V''\neq 0$ when $V'=0$, it follows that $\phi$ converges
  to a limit, say $\phi_\infty$. Thus, due to (\ref{eq:phiddot}), $\phi_{tt}$ converges to $V'(\phi_\infty)$. In order for this to be consistent,
  $V'(\phi_\infty)$ has to vanish. In other words, the solution converges to a fixed point of the system. Note also that $V(\phi_\infty)>0$ due to
  (\ref{eq:Vphilimit}) and the fact that $\theta_\infty>0$. Rewriting (\ref{seq:EFEwrtvar})--(\ref{eq:phiddot}), close to the fixed point, as an
  unconstrained second order equation for only $\phi$, we conclude that if $V''(\phi_\infty)>0$, then both eigenvalues of the linearisation of the
  corresponding first order system have negative real parts. This means that no solutions can converge to this fixed point to the past, except for
  the fixed point itself. However, this possibility has been excluded by the definition of $B^{\iso}_{\mrI,+}$. 
  In case $V''(\phi_\infty)<0$, the linearisation has one negative and one positive eigenvalue. This means that there is a
  one dimensional stable manifold (going into the past; i.e., for the time reversed flow). In order for the solution to converge to the fixed point,
  it therefore, for $t$ close enough to $-\infty$, has to belong to the stable manifold. Since we have excluded the possibility that the solution
  coincides with the fixed point, there are thus two possibilities: the solution converges to the fixed point along $\pm v_+$, where $v_+$ is an
  eigenvector corresponding to the positive eigenvalue. Each of these possibilities corresponds to a unique (up to time translation) solution to the
  equations. Finally, we wish to prove that these solutions constitute submanifolds of the state space $B^{\iso}_{\mrI,+}$. In order to prove this, note
  first that if $\theta_t(t_0)=0$, then
  $\phi_t(t_0)=0$. If, in addition, $V'(\phi(t_0))=0$, then $(\theta(t_0),\phi(t_0),\phi_t(t_0))$ is a fixed point, contradicting the assumption that
  the solution is non-constant. In other words, if $\theta_t(t_0)=0$, then we have to have $\phi_{tt}(t_0)\neq 0$, so that $\theta_t<0$ in a punctured
  neighbourhood of $t_0$. Next, we know that the solution curve is a smooth immersion. In order to prove that it is an embedding, note that the map taking
  $t\in\rn{}$ to $(\theta(t),\phi(t),\phi_t(t))$ is injective, since $\theta$ is strictly monotonically decreasing (it is of course also surjective onto its
  image and continuous). In order to prove that the inverse is continuous, let $(\theta_k,\phi_{0,k},\phi_{1,k})$ be elements of the image convering
  to $(\theta_*,\phi_{0,*},\phi_{1,*})$ in the image. Then, since $\theta$ is strictly monotonic, the $t_k$ corresponding to
  $(\theta_k,\phi_{0,k},\phi_{1,k})$ have to converge to a limit, say $t_*$. Moreover, $t_*$ has to correspond to $(\theta_*,\phi_{0,*},\phi_{1,*})$, again
  due to the monotonicity of $\theta$. Thus the solution is a smooth embedding of $\rn{}$ into $B^{\iso}_{\mrI,+}$. Thus the solution is a submanifold of
  $B^{\iso}_{\mrI,+}$.  
\end{proof}

Next, we consider the case that $\theta$ is unbounded. 

\begin{prop}\label{prop:iso theta unbounded}
  Assume $0\leq V\in\mfP_{\a_V}^0$ for some $\a_V\in (0,1)$. Let $\theta\in C^{\infty}(J,(0,\infty))$ and $\phi\in C^{\infty}(J,\rn{})$ be a solution to
  (\ref{seq:EFEwrtvar})--(\ref{eq:phiddot}) corresponding to initial data in $B^{\iso}_{\mrI,+}$, where $J=(t_-,\infty)$ is the maximal existence interval.
  If $\theta$ is unbounded, then $|\phi(t)|\rightarrow\infty$ as $t\downarrow t_-$.
\end{prop}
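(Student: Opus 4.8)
The plan is to pass to the time coordinate $\tau$ of Lemma~\ref{lemma:tautimecoord} and argue by contradiction, assuming $|\phi(t)|$ does not tend to $\infty$ as $t\downarrow t_-$. Since $\theta$ is unbounded and $V\geq 0$, Lemma~\ref{lemma:BianchiAdevelopment} gives $J=(0,\infty)$ and $\theta(t)\to\infty$ as $t\downarrow 0$, so $t_-=0$, and after reparametrising by (\ref{eq:dtdtau}) the range of $\tau$ is an interval unbounded below, with $\tau\to-\infty$ corresponding to $t\downarrow 0$. In the isotropic Bianchi type I case the constraint (\ref{eq:hamconfin}) reads $\tfrac32\phi_t^2+3V\circ\phi=\theta^2$, i.e.\ $\phi_\tau^2=6-18(V\circ\phi)/\theta^2\in[0,6]$; equation (\ref{eq:thetad}) gives $\theta_\tau=-\tfrac12\theta\phi_\tau^2$, so $\theta$ is non-increasing in $\tau$ and hence $\theta(\tau)\to\infty$ as $\tau\to-\infty$; and $|\phi(\tau)|\leq|\phi(0)|+\sqrt6|\tau|$, so $V\circ\phi(\tau)\leq c_0 e^{\sqrt6\a_V(|\phi(0)|+\sqrt6|\tau|)}$ for $\tau\leq 0$.

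First I would reduce the contradiction to an oscillation statement. If $\phi$ were bounded on $(-\infty,0]$, then $V\circ\phi$ would be bounded, so $\phi_\tau^2\to 6$ as $\tau\to-\infty$; then $|\phi_\tau|$ is bounded below and of constant sign near $-\infty$, forcing $\phi\to\pm\infty$, a contradiction; hence $\phi$ is unbounded. Next, since trivial data are excluded (cf.\ Remark~\ref{remark:Exceptional Bianchi type I}) the solution has no fixed point, so every critical point of $\phi$ is a strict local extremum; there $\phi_t=0$, hence $\theta^2=3V\circ\phi$, and along a sequence of critical points tending to $-\infty$ one has $\theta\to\infty$, hence $V\circ\phi\to\infty$, hence — using $V\in\mfP^0_{\a_V}$ — $|\phi|\to\infty$ there. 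Consequently: if $\phi$ has only finitely many critical points it is eventually monotone with a limit that cannot be finite (again by $\phi_\tau^2\to6$), so $|\phi|\to\infty$; and if $\phi$ is eventually of one sign, then applying the previous observation to its local minima (resp.\ maxima) again gives $|\phi|\to\infty$. Both contradict our assumption. Therefore $\phi$ changes sign infinitely often: there are zeros $z_j\to-\infty$, $z_{j+1}<z_j$, and between consecutive ones a turning point $\sigma_j\in(z_{j+1},z_j)$ with $\phi_\tau(\sigma_j)=0$, so $\theta(\sigma_j)^2=3V(\phi(\sigma_j))$ and $R_j:=|\phi(\sigma_j)|\to\infty$.

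It remains to rule out this oscillatory scenario, and this is where the crucial use of $\a_V<1$ enters. The idea is a bookkeeping argument in $\tau$-time. On one hand, since the turning values at $\sigma_{j+1}$ and $\sigma_j$ have opposite signs (being separated by the zero $z_{j+1}$), the total variation of $\phi$ on $[\sigma_{j+1},\sigma_j]$ is at least $R_j+R_{j+1}$, so by Cauchy--Schwarz $\ln\theta(\sigma_{j+1})-\ln\theta(\sigma_j)=\tfrac12\int_{\sigma_{j+1}}^{\sigma_j}\phi_\tau^2\,d\tau\geq \tfrac{(R_j+R_{j+1})^2}{2D_j}$, where $D_j:=\sigma_j-\sigma_{j+1}$. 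On the other hand, $\theta(\sigma_j)^2=3V(\phi(\sigma_j))\leq 3c_0 e^{\sqrt6\,\a_V R_j}$ forces $\ln\theta(\sigma_j)\leq \tfrac{\sqrt6\,\a_V}{2}R_j+O(1)$, and likewise for $j+1$. Combining these with an upper bound $D_j\leq C(R_j+R_{j+1})$ for the duration of the excursion yields $\ln\theta(\sigma_{j+1})\geq \ln\theta(\sigma_j)+c(R_j+R_{j+1})$ for a fixed $c>0$; since $\ln\theta(\sigma_j)\to\infty$ and $\a_V<1$, this eventually exceeds $\tfrac{\sqrt6\,\a_V}{2}R_{j+1}+O(1)$, a contradiction. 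Hence $\phi$ is in fact eventually of one sign, so $|\phi|\to\infty$, contradicting the standing assumption and proving Proposition~\ref{prop:iso theta unbounded}. (When $V\circ\phi/\theta^2\to0$ one can shortcut this: Remark~\ref{remark:V divided by theta sq to zero} and Lemma~\ref{lemma: Bianchi I isotropic} then apply and give $\theta^{-1}\phi_t\to\pm(2/3)^{1/2}$, i.e.\ $\phi_\tau^2\to6$, whence $|\phi|\to\infty$ at once.)

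The main obstacle is the duration bound $D_j\leq C(R_j+R_{j+1})$. The difficulty is that $\phi$ can linger near a turning point, where $\phi_\tau$ is small and the motion is governed by $V'(\phi)$, which is \emph{not} controlled by $V\in\mfP^0_{\a_V}$ (a bound on $V$, not on $V'$). I would handle this by a local comparison near each $\sigma_j$, using the identity $\theta(\sigma_j)^2=3V(\phi(\sigma_j))$ together with the a priori bound $V\leq c_0 e^{\sqrt6\a_V|\cdot|}$ to estimate how rapidly $V\circ\phi/\theta^2$ (hence $\phi_\tau^2$) recovers once $\phi$ moves off the turning value, thereby controlling the slow part of the excursion; this is where the bulk of the technical work lies.
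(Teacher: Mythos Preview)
Your reduction to the oscillatory case is fine, and you correctly identify the duration bound $D_j\le C(R_j+R_{j+1})$ as the crux. The problem is that this bound is not just ``the bulk of the technical work''; with only $V\in\mfP_{\a_V}^0$ it is likely unprovable. You yourself note that near a turning point the motion is governed by $V'$, which $\mfP_{\a_V}^0$ does not control. Nothing prevents $V$ from being nearly flat in long intervals around the values $\phi(\sigma_j)$, so that the coupled system $(\phi,\theta)$ lingers near the fixed point $(\phi(\sigma_j),0,\theta(\sigma_j))$ for a $\tau$-time that is arbitrarily large compared to $R_j$. Your proposed ``local comparison'' uses only the value $V(\phi(\sigma_j))=\theta(\sigma_j)^2/3$ and the upper bound on $V$; neither gives any information about how fast $V$ drops as $\phi$ moves off $\phi(\sigma_j)$, so it cannot bound the lingering time. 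Even if you managed some bound $D_j\le C(R_j+R_{j+1})$, your contradiction needs $C<1/(\sqrt6\,\a_V)$, and there is no mechanism forcing $C$ that small.

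The paper avoids all of this by taking seriously your own parenthetical remark: it proves $V\circ\phi/\theta^2\to 0$ directly, via a bootstrap started not at a turning point (the worst place, since there $V\circ\phi/\theta^2=1/3$) but at one of the times $\tau_k\to-\infty$ where $\phi(\tau_k)$ is bounded --- these exist by the contradiction hypothesis. At such times $V\circ\phi(\tau_k)$ is bounded while $\theta(\tau_k)\to\infty$, so $9V\circ\phi(\tau_k)/\theta(\tau_k)^2$ is as small as you like; hence $\phi_\tau^2(\tau_k)/2\ge 3-\varepsilon$ for $k$ large. On the bootstrap region $\{\tau\le\tau_k:\phi_\tau^2/2\ge 3-\varepsilon\ \text{on}\ [\tau,\tau_k]\}$ you have $\theta(\tau)\ge\theta(\tau_k)e^{(3-\varepsilon)(\tau_k-\tau)}$ and $|\phi(\tau)|\le\sqrt6(\tau_k-\tau)+|\phi(\tau_k)|$, so $9V\circ\phi/\theta^2\le C_k\,e^{-6(1-\varepsilon/3-\a_V)(\tau_k-\tau)}$ with $C_k\to 0$; choosing $\varepsilon$ small (so $1-\varepsilon/3-\a_V>0$) and $k$ large, this is $\le\varepsilon/2$, improving the bootstrap and making it open. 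Thus $V\circ\phi/\theta^2\to 0$, and Lemma~\ref{lemma: Bianchi I isotropic} with Remark~\ref{remark:V divided by theta sq to zero} gives $|\phi|\to\infty$, the contradiction. Your oscillation analysis is then unnecessary.
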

\begin{proof}
  Due to Lemma~\ref{lemma:BianchiAdevelopment}, we know that $\theta$ is neither integrable to the future nor to the past. Thus, by introducing $\tau$
  according to (\ref{eq:dtdtau}), the range of $\tau$ is $\rn{}$.
  In order to prove that $|\phi|$ has to
  converge to $\infty$, assume that it does not. This means that there is a sequence $\{\tau_k\}$ with $\tau_k\rightarrow -\infty$ such that
  $\phi(\tau_k)$ is bounded. In fact, we can assume $\phi(\tau_k)$ to converge to, say, $\phi_\infty$. Fix $\e>0$. Given $k$, let 
  \[
  \ma_k:=\{\tau\leq\tau_k:\phi_\tau^2(s)/2\geq 3-\e\ \forall s\in [\tau,\tau_k]\}.
  \]
  Note that $\ma_k$ is closed, connected and, if $k$ is large enough, non-empty. The last statement is due to the fact that
  \begin{equation}\label{eq:exp norm constr}
    3=\tfrac{1}{2}\phi_\tau^2+\tfrac{9V\circ\phi}{\theta^2},
  \end{equation}
  see (\ref{eq:hamconfin}), the fact that $V\circ\phi(\tau_k)\rightarrow V(\phi_\infty)$ and the fact that $\theta(\tau_k)\rightarrow\infty$. Next, we
  wish to prove that if $k$ is large enough, then $\ma_k$ is an open subset of $(-\infty,\tau_k]$. Let, to this end, $\tau\in\ma_k$. Then
  \[
  \d_\tau\theta(s)=-\tfrac{1}{2}\phi_\tau^2(s)\theta(s)\leq -(3-\e)\theta(s)
  \]
  for all $s\in [\tau,\tau_k]$, so that $\theta(s)\geq \theta(\tau_k)e^{(3-\e)(\tau_k-s)}$ for all $s\in [\tau,\tau_k]$. On the other hand,
  \[
  |\phi(s)|\leq \sqrt{6}(\tau_k-s)+|\phi(\tau_k)|
  \]
  for all $s\in [\tau,\tau_k]$. Combining the above observations
  \begin{equation}\label{eq:nine V by theta sq}
    \tfrac{9V\circ\phi(s)}{\theta^2(s)}\leq \tfrac{9c_0 e^{\sqrt{6}\a_V |\phi(\tau_k)|}}{\theta(\tau_k)^2}e^{-6(1-\e/3-\a_V)(\tau_k-s)}
  \end{equation}
  for all $s\in [\tau,\tau_k]$. Assume $\e>0$ to be small enough that $1-\e/3-\a_V>0$. Assume, moreover, $k$ to be large enough that
  the first factor on the left hand side of (\ref{eq:nine V by theta sq}) is bounded from above by $\e/2$. Then $s\in\ma_k$ implies that
  $\phi_\tau^2(s)/2\geq 3-\e/2$ and $\ma_k$ contains a neighbourhood of $s$. In other words, $\ma_k$ is open, so that $\ma_k=(-\infty,\tau_k]$.
  This means that $V\circ\phi/\theta^2$ decays exponentially. This means that Lemma~\ref{lemma: Bianchi I isotropic} and
  Remark~\ref{remark:V divided by theta sq to zero} apply. In particular, $|\phi(t)|\rightarrow\infty$ as $t\downarrow t_-$.
\end{proof}
It is natural to conjecture that generic solutions have asymptotics corresponding to the conclusions of Lemma~\ref{lemma: Bianchi I isotropic}.
In the case of an exponential potential, this follows from Proposition~\ref{eq:gen isotropic Bianchi I exp pot} below. Proving it for
general potentials is more difficult. However, we prove one result of this nature in Theorem~\ref{thm:asympt as exp pot} below. This theorem
rests on the following proposition. 
\begin{prop}\label{prop:phi exc limit}
  Assume $0\leq V\in C^{\infty}(\rn{})$ and that there is an $M\geq 1$ and an $\a_V\in (0,1)$ such that $V(s)>0$ for all $|s|\geq M$;
  \begin{equation}\label{eq:VprbyVbd}
    |(\ln V)'(s)|\leq \sqrt{6}\a_V
  \end{equation}
  for all $|s|\geq M$; and
  \begin{equation}\label{eq:as exp or pol}
    \lim_{s\rightarrow\pm\infty}(\ln V)''(s)=0.
  \end{equation}
  Let $\theta\in C^{\infty}(J,(0,\infty))$ and $\phi\in C^{\infty}(J,\rn{})$ be a solution to (\ref{seq:EFEwrtvar})--(\ref{eq:phiddot}) corresponding
  to initial data in $B^{\iso}_{\mrI,+}$, where $J$ is the maximal existence interval. Assume that $\theta$ is unbounded to the past. Then either 
  all the conclusions of Lemma~\ref{lemma: Bianchi I isotropic} hold, or
  \begin{equation}\label{eq:phi exc limit}
    \lim_{\tau\rightarrow-\infty}\left[\phi_\tau(\tau)+(\ln V)'[\phi(\tau)]\right]=0.
  \end{equation}
\end{prop}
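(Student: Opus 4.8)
\textbf{Proof proposal for Proposition~\ref{prop:phi exc limit}.}

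The plan is to work in the $\tau$-time coordinate (which is permitted since $\theta$ unbounded to the past and Lemma~\ref{lemma:BianchiAdevelopment} together give $\theta\notin L^1$ to the past, so $\tau$ ranges over $(-\infty,\tau_0]$) and to analyse the quantity $W:=\phi_\tau+(\ln V)'\circ\phi$ along the flow. The key algebraic input is the Hamiltonian constraint in the form (\ref{eq:exp norm constr}), i.e. $\tfrac12\phi_\tau^2+\tfrac{9V\circ\phi}{\theta^2}=3$, together with the evolution equations $\d_\tau\theta=-\tfrac12\phi_\tau^2\,\theta$ and $\phi_{\tau\tau}=-\left(3-\tfrac12\phi_\tau^2\right)\phi_\tau-\tfrac{9V'\circ\phi}{\theta^2}$, the latter coming from (\ref{eq:phiddot}) rewritten in $\tau$-time. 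First I would observe that there is a dichotomy: either $\phi$ stays bounded near the singularity, or $|\phi|\to\infty$. If $\phi$ stays bounded, then along a sequence $\tau_k\to-\infty$ with $\phi(\tau_k)$ convergent one argues exactly as in the proof of Proposition~\ref{prop:iso theta unbounded}: using (\ref{eq:VprbyVbd}) in place of the $\mfP_{\a_V}^0$ bound (note $|(\ln V)'|\le\sqrt6\a_V$ on $|s|\ge M$ yields $V(s)\le V(\pm M)e^{\sqrt6\a_V|s-(\pm M)|}$, so $V\in\mfP_{\a_V}^0$ effectively for the purposes of the bootstrap), one shows $V\circ\phi/\theta^2$ decays exponentially; then Lemma~\ref{lemma: Bianchi I isotropic} and Remark~\ref{remark:V divided by theta sq to zero} give all the conclusions of that lemma. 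So from now on I may assume $|\phi(\tau)|\to\infty$ as $\tau\to-\infty$; say $\phi(\tau)\to+\infty$ (the other case is symmetric), so that for $\tau$ small enough $\phi(\tau)\ge M$ and all the hypotheses on $V$ are in force along the solution.

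Next I would set $\lambda(\tau):=(\ln V)'[\phi(\tau)]$ and compute $\lambda'=(\ln V)''[\phi]\,\phi_\tau$, and derive the evolution equation for $W=\phi_\tau+\lambda$. Using the $\phi_{\tau\tau}$ equation and the constraint to eliminate $\tfrac{9V'\circ\phi}{\theta^2}=\tfrac{9V\circ\phi}{\theta^2}\lambda=\left(3-\tfrac12\phi_\tau^2\right)\lambda$, one gets
\begin{equation}\label{eq:W evol plan}
  W' = -\left(3-\tfrac12\phi_\tau^2\right)(\phi_\tau+\lambda)+(\ln V)''[\phi]\,\phi_\tau
     = -\left(3-\tfrac12\phi_\tau^2\right)W+(\ln V)''[\phi]\,\phi_\tau.
\end{equation}
Now $|\phi_\tau|\le\sqrt6$ by the constraint, and by (\ref{eq:as exp or pol}) the coefficient $(\ln V)''[\phi(\tau)]$ tends to $0$ as $\tau\to-\infty$; hence the inhomogeneous term in (\ref{eq:W evol plan}) is $o(1)$. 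The main point is then to show that $3-\tfrac12\phi_\tau^2$ is bounded below by a positive constant along the solution (at least eventually toward the past), i.e. $\phi_\tau^2$ stays strictly below $6$; equivalently, by the constraint, $V\circ\phi/\theta^2$ stays bounded below by a positive constant. Indeed, if along some sequence $\tau_k\to-\infty$ one had $V\circ\phi(\tau_k)/\theta^2(\tau_k)\to0$, then $\phi_\tau^2(\tau_k)\to6$; feeding this into the monotonicity argument (as in the proof of Proposition~\ref{prop:iso theta unbounded}, using (\ref{eq:VprbyVbd})) one would find $V\circ\phi/\theta^2$ decays exponentially on $(-\infty,\tau_k]$, putting us in the first alternative of the Proposition (all conclusions of Lemma~\ref{lemma: Bianchi I isotropic}). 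So, outside that alternative, there is a constant $c>0$ and a $T$ with $3-\tfrac12\phi_\tau^2(\tau)\ge c$ for all $\tau\le T$.

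Given this uniform positive lower bound, (\ref{eq:W evol plan}) is a scalar linear ODE of the form $W'=-a(\tau)W+b(\tau)$ with $a(\tau)\ge c>0$ and $b(\tau)\to0$ as $\tau\to-\infty$, on the interval $(-\infty,T]$. Since we integrate toward $-\infty$, the stable direction for this ODE is precisely the one forcing $W\to0$: writing $W(\tau)=e^{\int_\tau^{T}a}\,W(T)-\int_\tau^T e^{\int_\tau^s a}\,b(s)\,ds$ and noting $e^{\int_\tau^s a}\le e^{c(s-\tau)}$ for $s\ge\tau$, the integral term is $o(1)$ by a standard splitting argument (split $\int_\tau^T=\int_\tau^{T_1}+\int_{T_1}^T$, make $|b|\le\e$ on $(-\infty,T_1]$ and use the exponential factor to bound the near-$\tau$ contribution), while the boundary term $e^{\int_\tau^T a}W(T)$ — which grows — must vanish, because $W$ itself is bounded ($|\phi_\tau|\le\sqrt6$ and $|\lambda|\le\sqrt6\a_V$). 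Forcing the coefficient of the growing mode to zero gives $W(\tau)=-\int_{-\infty}^\tau e^{-\int_s^\tau a}\,b(s)\,ds$, which is $o(1)$; hence (\ref{eq:phi exc limit}) holds. The main obstacle I anticipate is making the "$\phi_\tau^2$ stays away from $6$, or else we fall into the Lemma~\ref{lemma: Bianchi I isotropic} alternative" step fully rigorous — one needs the bootstrap/openness argument of Proposition~\ref{prop:iso theta unbounded} adapted to the weaker hypothesis (\ref{eq:VprbyVbd}) (valid only for $|s|\ge M$) rather than a global $\mfP_{\a_V}^0$ bound, and one must be careful that once $\phi\ge M$ the estimate $V(\phi(s))\le V(\phi(\tau_k))e^{\sqrt6\a_V|\phi(s)-\phi(\tau_k)|}$ from (\ref{eq:VprbyVbd}) is exactly what replaces (\ref{eq:V k-derivatives exp estimate}) there; everything downstream (the linear ODE analysis for $W$) is then routine.
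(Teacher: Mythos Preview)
Your argument is correct and takes a somewhat different route from the paper's. Both proofs first reduce to $|\phi|\to\infty$ (the hypothesis (\ref{eq:VprbyVbd}) genuinely gives $V\in\mfP_{\a_V}^0$, so Proposition~\ref{prop:iso theta unbounded} applies directly and your preliminary ``$\phi$ bounded'' case is unnecessary) and then derive the evolution equation for $W=\phi_\tau+(\ln V)'\circ\phi$. From there the paper runs a barrier argument: at any point where $|W|=\e$ one automatically has $|\phi_\tau|\leq\e+\sqrt6\a_V$, hence $9V\circ\phi/\theta^2=3-\tfrac12\phi_\tau^2\geq 2(1-\a_V^2)$ for $\e$ small; comparing the drift $-(9V/\theta^2-(\ln V)'')W$ against the error $-(\ln V)''\cdot(\ln V)'$ shows that $|W|$ cannot decrease going to the past once it reaches $\e$. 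So either $W\to0$, or $|W|\geq\e$ eventually; in the second case $\phi_{\tau\tau}=-(9V/\theta^2)W$ has fixed sign, $\phi_\tau$ is monotone and bounded, and a short contradiction forces $\phi_\tau\to\pm\sqrt6$, i.e.\ $V\circ\phi/\theta^2\to0$ and the Lemma~\ref{lemma: Bianchi I isotropic} alternative. Your approach instead first dichotomises on $9V\circ\phi/\theta^2$: if it fails to be eventually bounded below, a bootstrap (\`a la Proposition~\ref{prop:iso theta unbounded}) along a subsequence where $\phi_\tau^2\to6$ recovers the Lemma~\ref{lemma: Bianchi I isotropic} alternative; if it is bounded below by $c>0$, the scalar linear ODE $W'=-aW+b$ with $a\geq c$, $b\to0$ and $W$ bounded forces $W$ to equal the unique bounded solution, which tends to $0$. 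The paper's barrier argument is more elementary because it extracts the lower bound on $9V/\theta^2$ only pointwise, exactly where $|W|=\e$, and thereby avoids the separate bootstrap you flag as the main obstacle; conversely, once you have secured the uniform lower bound, your ODE step is very clean. One cosmetic slip: the bounded particular solution is $+\int_{-\infty}^\tau e^{-\int_s^\tau a}b(s)\,ds$, not $-\int$.
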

\begin{remark}
  If $V$ is asymptotically a polynomial or an exponential function, then (\ref{eq:as exp or pol}) is satisfied. 
\end{remark}
\begin{remark}
  It is reasonable to expect (\ref{eq:phi exc limit}) to correspond to an exceptional situation.
\end{remark}
\begin{remark}\label{remark:bounded in one direction}
  If we replace the assumptions that $V(s)>0$ and that (\ref{eq:VprbyVbd}) hold for $|s|\geq M$ with the assumptions that these estimates hold only for
  $s\geq M$ and the assumption that $V$ is bounded for $s\leq 0$; and if we replace the assumption that (\ref{eq:as exp or pol}) holds with the assumption
  that it only holds as $s\rightarrow\infty$, then the same conclusions hold. Finally, if $\bar{V}(s)=V(-s)$ satisfies these modified conditions, then the
  same conclusions hold. 
\end{remark}
\begin{proof}
  Due to (\ref{eq:VprbyVbd}), $V\in\mfP_{\a_V}^0$. Due to Lemma~\ref{lemma:BianchiAdevelopment}, we know that $\theta$ is neither
  integrable to the future nor to the past. Thus, introducing $\tau$ according to (\ref{eq:dtdtau}), the range of $\tau$ is $\rn{}$. Moreover,
  $|\phi(\tau)|\rightarrow\infty$ as $\tau\rightarrow-\infty$ due to Proposition~\ref{prop:iso theta unbounded}. Note that
  \begin{equation}\label{eq:phi tau tau iso}
    \phi_{\tau\tau}=-\tfrac{9V\circ\phi}{\theta^2}\phi_\tau-\tfrac{9V'\circ\phi}{\theta^2}
    =-\tfrac{9V\circ\phi}{\theta^2}\big(\phi_\tau+\tfrac{V'\circ\phi}{V\circ\phi}\big)
    =-\tfrac{9V\circ\phi}{\theta^2}\big(\phi_\tau+(\ln V)'\circ\phi\big).
  \end{equation}
  This means that
  \begin{equation}\label{eq:dtau phitau minus vproverv}
    \begin{split}
      \d_{\tau}\big(\phi_\tau+(\ln V)'\circ\phi\big)
      = & -\tfrac{9V\circ\phi}{\theta^2}\big(\phi_\tau+(\ln V)'\circ\phi\big)
      +(\ln V)''\circ\phi\cdot\phi_\tau\\
      = & -\big(\tfrac{9V\circ\phi}{\theta^2}-(\ln V)''\circ\phi\big)
      \big(\phi_\tau+(\ln V)'\circ\phi\big)\\
      & -(\ln V)''\circ\phi\cdot(\ln V)'\circ\phi.
    \end{split}
  \end{equation}  
  Let $\e>0$ and $\b_V:=2(1-\a_V^2)$. Assuming $\e$ to be small enough, the bound depending only on $\a_V$; $\tau$ to be such that $|\phi(\tau)|\geq M$;
  and the estimate $|\phi_\tau(\tau)+(\ln V)'\circ\phi(\tau)|\leq\e$ to hold, it follows that $9V\circ\phi(\tau)/\theta^2(\tau)\geq\b_V$; cf.
  (\ref{eq:exp norm constr}) and (\ref{eq:VprbyVbd}). Let $T$ be close enough to $-\infty$ that
  \[
    \big|(\ln V)''\circ\phi(\tau)\big|\leq \tfrac{\b_V}{2},\ \ \
    \big|(\ln V)''\circ\phi(\tau)\cdot(\ln V)'\circ\phi(\tau)\big|\leq \tfrac{\e\b_V}{4}
  \]
  and $|\phi(\tau)|\geq M$
  for all $\tau\leq T$. This means that if $|\phi_\tau+(\ln V)'\circ\phi|=\e$, then the absolute value of the first term on the far
  right hand side of (\ref{eq:dtau phitau minus vproverv}) is bounded from below by $\e\b_V/2$. Moreover, the absolute value of the second
  term on the far right hand side is bounded from above by $\e\b_V/4$. Since the first term has the opposite sign of
  $\phi_\tau+(\ln V)'\circ\phi$, it is clear that if $|\phi_\tau(\tau)+(\ln V)'\circ\phi(\tau)|=\e$ for some $\tau\leq T$, then
  $|\phi_\tau(\tau)+(\ln V)'\circ\phi(\tau)|\geq\e$ for all $\tau\leq T$. To conclude, either (\ref{eq:phi exc limit}) holds or there is
  an $\e>0$ and a $T$ such that $|\phi_\tau(\tau)+(\ln V)'\circ\phi(\tau)|\geq\e$ for all $\tau\leq T$. In the latter case, it follows from
  (\ref{eq:phi tau tau iso}) that $|\phi_{\tau\tau}|\geq 9\e V\circ\phi/\theta^2$ for all $\tau\leq T$. Since we can assume $V(\phi(\tau))>0$
  for $\tau\leq T$, this means that $\phi_\tau$ is strictly monotonic for $\tau\leq T$. This means that $\phi_\tau$ converges to a limit. If this
  limit is different from $\pm\sqrt{6}$, it follows that $9V\circ\phi/\theta^2$ converges to a non-zero limit; see (\ref{eq:exp norm constr}). Since
  $|\phi_{\tau\tau}|\geq 9\e V\circ\phi/\theta^2$ for all $\tau\leq T$, this means that $|\phi_\tau|$ tends to infinity, a conclusion which is
  inconsistent with (\ref{eq:exp norm constr}). Thus $V\circ\phi(\tau)/\theta^{2}(\tau)$ converges to zero as $\tau\rightarrow-\infty$ and all the
  conclusions of Lemma~\ref{lemma: Bianchi I isotropic} hold. The proposition follows.

  In order to prove Remark~\ref{remark:bounded in one direction}, note that we know that $|\phi(\tau)|\rightarrow\infty$ as $\tau\rightarrow-\infty$.
  If $\phi(\tau)\rightarrow-\infty$ as $\tau\rightarrow-\infty$, then $V\circ\phi$ is bounded to the past, so that
  $V\circ\phi(\tau)/\theta^2(\tau)\rightarrow 0$ as $\tau\rightarrow-\infty$. Thus all the conclusions of Lemma~\ref{lemma: Bianchi I isotropic} hold.
  If $\phi(\tau)\rightarrow\infty$ as $\tau\rightarrow\infty$, then the first part of the proof applies (alternately, one can modify the potential $V(s)$
  for $s\leq 0$ so that the conditions of the proposition apply). 
\end{proof}
As mentioned, we expect (\ref{eq:phi exc limit}) to correspond to an exceptional situation. Next, we verify that this is the case when the
potential is of the form $V(s)=V_0e^{\lambda s}$.
\begin{prop}\label{eq:gen isotropic Bianchi I exp pot}
  Let $0<V_0\in\rn{}$ and $0<\lambda<\sqrt{6}$. Define $V(s):=V_0e^{\lambda s}$. Let $\theta\in C^{\infty}(J,(0,\infty))$ and $\phi\in C^{\infty}(J,\rn{})$
  be a solution to (\ref{seq:EFEwrtvar})--(\ref{eq:phiddot}) corresponding to initial data in $B^{\iso}_{\mrI,+}$, where $J$ is the maximal existence
  interval. Then  $J$ can be assumed to equal $(0,\infty)$ and introducing $\tau$ as in (\ref{eq:dtdtau}), the range of $\tau$ is $\rn{}$. Finally,
  there are two possibilities concerning the asymptotics as $\tau\rightarrow-\infty$. Either
  $\phi_\tau=-\lambda$ for the entire solution or all the conclusions of Lemma~\ref{lemma: Bianchi I isotropic} apply. Moreover, there is a
  unique solution satisfying $\phi_\tau=-\lambda$. 
\end{prop}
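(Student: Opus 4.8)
The plan is to derive this from the general theory of isotropic Bianchi type~I solutions developed above; the only genuinely new inputs are a short argument that $\theta$ is unbounded towards the past and an argument promoting the exceptional asymptotic relation to the exact identity $\phi_\tau\equiv-\lambda$.

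First I would record the basic structure. For $V=V_0e^{\lambda s}$ we have $V>0$, $|(\ln V)'|=\lambda\le\sqrt6\,\alpha_V$ with $\alpha_V:=\lambda/\sqrt6\in(0,1)$ (here $0<\lambda<\sqrt6$ is used), and $(\ln V)''\equiv0$, so (\ref{eq:VprbyVbd}) and (\ref{eq:as exp or pol}) hold and Proposition~\ref{prop:phi exc limit} (hence also Lemma~\ref{lemma: Bianchi I isotropic}) is available. I then claim $\theta$ is unbounded towards the past. Proposition~\ref{prop:iso theta bounded} is \emph{not} directly applicable, since $V(s)\not\to\infty$ as $s\to-\infty$; so this must be argued by hand, along the lines of that proof. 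Passing to $\tau$-time (legitimate by Lemma~\ref{lemma:tautimecoord}), boundedness of $\theta$ would give $\phi_\tau^2\in L^1((-\infty,0])$ via $\ln\theta(\tau)=\ln\theta(0)+\tfrac12\int_\tau^0\phi_\tau^2$; since $\phi_{\tau\tau}$ is bounded (by (\ref{eq:phi tau tau iso}), $|\phi_\tau|\le\sqrt6$, and $9V\circ\phi/\theta^2=3-\tfrac12\phi_\tau^2\in[0,3]$) this forces $\phi_\tau\to0$, hence $V_0e^{\lambda\phi}\to\theta_-^2/3>0$ and $\phi\to\phi_\infty$ finite, whence (\ref{eq:phi tau tau iso}) gives $\phi_{\tau\tau}\to-3\lambda<0$, contradicting $\phi_\tau\to0$. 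Thus $\theta(\tau)\to\infty$ as $\tau\to-\infty$, and Lemma~\ref{lemma:tautimecoord} gives that the range of $\tau$ is $\mathbb{R}$.

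Next I would invoke the dichotomy. By Proposition~\ref{prop:phi exc limit}, either all the conclusions of Lemma~\ref{lemma: Bianchi I isotropic} hold, or $w(\tau):=\phi_\tau(\tau)+\lambda\to0$ as $\tau\to-\infty$. To handle the second alternative, note that because $(\ln V)''\equiv0$, (\ref{eq:dtau phitau minus vproverv}) reduces to the linear homogeneous equation $w'=-(9V\circ\phi/\theta^2)\,w$, so $w$ is either identically zero or nowhere zero. If $w$ were nowhere zero, then, since $9V\circ\phi/\theta^2=3-\tfrac12\phi_\tau^2\to3-\tfrac12\lambda^2>0$, there are $T$ and $c>0$ with $9V\circ\phi/\theta^2\ge c$ on $(-\infty,T]$, so $|w(\tau)|=|w(T)|\exp\!\big(\int_\tau^{T}9V\circ\phi/\theta^2\big)\ge|w(T)|e^{c(T-\tau)}\to\infty$, contradicting $w\to0$. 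Hence $w\equiv0$, i.e. $\phi_\tau\equiv-\lambda$; this proves the dichotomy. For the claim about $J$: in the first branch $J=(0,\infty)$ after a translation is part of Lemma~\ref{lemma: Bianchi I isotropic}, while in the second branch $\phi_\tau\equiv-\lambda$ gives $\theta_\tau=-\tfrac{\lambda^2}{2}\theta$ by (\ref{eq: theta tau isotropic}), so $\theta=\theta(0)e^{-\lambda^2\tau/2}$ and $dt/d\tau=3/\theta$ is integrable at $-\infty$, forcing $t_->-\infty$; so $J=(0,\infty)$ after a translation in either case.

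Finally I would settle existence and uniqueness of the solution with $\phi_\tau\equiv-\lambda$. The locus $\{\phi_\tau=-\lambda\}$ intersected with the Hamiltonian constraint is the curve $C:=\{\phi_t=-\lambda\theta/3,\ (6-\lambda^2)\theta^2=18V_0e^{\lambda\phi},\ \theta>0\}\subset B^{\iso}_{\mrI,+}$ (non-triviality holds since $V'(\phi_0)=\lambda V_0e^{\lambda\phi_0}\ne0$); by the linear equation for $w$, a solution meeting $C$ stays on $C$, so such solutions exist. Since $C$ is a connected $1$-manifold on which the flow never vanishes ($\dot\phi=\phi_t\ne0$ on $C$), it is a single orbit, so the solution with $\phi_\tau\equiv-\lambda$ is unique up to time translation. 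The step I expect to require the most care is the promotion of the limit $\phi_\tau\to-\lambda$ to the identity $\phi_\tau\equiv-\lambda$ via the linear equation $w'=-(9V\circ\phi/\theta^2)w$ together with the sharp positivity $3-\tfrac12\lambda^2>0$; beyond that, and beyond noticing that Proposition~\ref{prop:iso theta bounded} must be bypassed because the exponential potential fails to be coercive as $s\to-\infty$, everything reduces to the results quoted above.
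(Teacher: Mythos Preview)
Your proof is correct. The paper takes a more direct and self-contained route: rather than invoking Proposition~\ref{prop:phi exc limit} and separately arguing that $\theta$ is unbounded, it combines (\ref{eq:hamconfin}) and (\ref{eq:phi tau tau iso}) to obtain the scalar autonomous cubic
\[
\partial_\tau\phi_\tau=\tfrac{1}{2}(\phi_\tau-\sqrt{6})(\phi_\tau+\sqrt{6})(\phi_\tau+\lambda),
\]
from which it is immediate that $\phi_\tau$ converges to one of $\{\pm\sqrt{6},-\lambda\}$, that $\theta_t\le-\beta\theta^2$ near the past (so $J=(0,\infty)$ without any separate unboundedness argument), and that the sign pattern of the cubic forces $\phi_\tau\equiv-\lambda$ if $\phi_\tau\to-\lambda$. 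Your route via the linear equation $w'=-(9V\circ\phi/\theta^2)\,w$ is the same observation in different dress, since the cubic factorises as $-(9V\circ\phi/\theta^2)(\phi_\tau+\lambda)$; what your approach buys is that it shows explicitly how the result slots into the general framework of Proposition~\ref{prop:phi exc limit}, while the paper's buys brevity and avoids any appeal to earlier propositions. Uniqueness is argued by explicit integration in the paper (fixing $\phi_0=0$ by translation and then determining $\theta_0$ from the constraint), which amounts to your orbit argument on the curve $C$.
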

\begin{remark}
  Since $B^{\iso}_{\mrI,+}$ is two dimensional, the condition $\phi_\tau=-\lambda$ is non-generic. 
\end{remark}
\begin{proof}
  Due to Lemma~\ref{lemma:BianchiAdevelopment}, $\theta(t)>0$ for all $t$ and $\theta$ is neither integrable to the future nor to the past. By
  introducing $\tau$ according to (\ref{eq:dtdtau}), the range of $\tau$ is thus $\rn{}$. Next, combining (\ref{eq:hamconfin}) and
  (\ref{eq:phi tau tau iso}) yields
  \begin{equation}\label{eq:phitau minus lambda}
    \d_\tau\phi_\tau=-\tfrac{9V\circ\phi}{\theta^2}(\phi_\tau+\lambda)=\tfrac{1}{2}(\phi_\tau-\sqrt{6})(\phi_\tau+\sqrt{6})(\phi_\tau+\lambda).
  \end{equation}
  In particular, $\phi_\tau$ converges to a limit belonging to $\{\pm\sqrt{6},-\lambda\}$. Due to (\ref{eq:thetad}), there is thus a constant $\b>0$ and
  a $T\in J$ such that $\theta_t\leq -\beta\theta^2$ for all $t\in J$ with $t\leq T$. This means that $\theta$ blows up in finite time to the past, so
  that $J$ can be assumed to equal
  $(0,\infty)$. Next, note that if $\phi_\tau$ converges to $\pm\sqrt{6}$, then all the conclusions of Lemma~\ref{lemma: Bianchi I isotropic} apply. What
  remains to be considered is thus that $\phi_\tau$ asymptotes to $-\lambda$. However, if $\phi_\tau\in (-\sqrt{6},-\lambda)$, then $\phi_{\tau\tau}>0$
  (meaning that $\phi_\tau$ decreases to the past), and if $\phi_\tau\in (-\lambda,\sqrt{6})$, then $\phi_{\tau\tau}<0$ (meaning the $\phi_\tau$ increases to
  the past). In order for $\phi_\tau$ to asymptote to $-\lambda$ to the past, we thus need to have
  $\phi_\tau=-\lambda$ for the entire solution. Then $\phi(\tau)=-\lambda\tau+\phi_0$ for some constant $\phi_0$. However, by
  a time translation, we can assume $\phi_0=0$. This means that $\ln\theta(\tau)=-\lambda^2\tau/2+\ln\theta_0$ for some constant $\theta_0>0$.
  Finally, in order for the constraint to be satisfied, we have to have $3=\lambda^2/2+9\theta_0^{-2}V_0$. 
  This means that there is, up to time translation, a unique solution with $\phi_\tau=-\lambda$. The proposition follows. 
\end{proof}

\begin{thm}\label{thm:asympt as exp pot}
  Assume $0\leq V\in C^{\infty}(\rn{})$ and that there are constants $C_V$ and $M$ such that $V(s)>0$ and
  \begin{equation}\label{eq:ln V biss est}
    \left|\left(\ln V\right)''(s)\right|\leq C_V\ldr{s}^{-2}
  \end{equation}
  for all $|s|\geq M$. This means that $(\ln V)'(s)$ converges to limits as $s\rightarrow\pm \infty$. Call the limits $\lambda_\pm$ and assume
  that $-\sqrt{6}<\lambda_-<0$ and that $0<\lambda_+<\sqrt{6}$. Let $\theta\in C^{\infty}(J,(0,\infty))$ and $\phi\in C^{\infty}(J,\rn{})$ be a
  solution to (\ref{seq:EFEwrtvar})--(\ref{eq:phiddot}) corresponding to initial data in $B^{\iso}_{\mrI,+}$, where $J=(t_-,t_+)$ is the maximal
  existence interval. Assuming that $\theta$ is unbounded, there are the following, mutually exclusive, cases:
  \begin{enumerate}[(i)]
  \item The solution is such that (\ref{eq:phi exc limit}) holds and $\phi(t)\rightarrow\infty$ as $t\rightarrow t_-$. Up to
    time translation, there is exactly one such solution, and its image is a smooth submanifold of $B^{\iso}_{\mrI,+}$.
  \item The solution is such that (\ref{eq:phi exc limit}) holds and $\phi(t)\rightarrow-\infty$ as $t\rightarrow t_-$. Up to
    time translation, there is exactly one such solution, and its image is a smooth submanifold of $B^{\iso}_{\mrI,+}$.
  \item All the conclusions of Lemma~\ref{lemma: Bianchi I isotropic} apply.
  \end{enumerate}
  In particular, excluding two exceptional solutions (which are also codimension one submanifolds), initial data in $B^{\iso}_{\mrI,+}$, whose developments
  have crushing singularities, give rise to solutions such that all the conclusions of Lemma~\ref{lemma: Bianchi I isotropic} apply.
\end{thm}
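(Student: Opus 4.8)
\emph{Proof proposal.} The plan is to establish the trichotomy (i)--(iii) for the given solution and then read off the ``in particular'' statement. First I would check that the hypotheses place us under Proposition~\ref{prop:phi exc limit}: since $(\ln V)'(s)\to\lambda_\pm$ with $\lambda_\pm^{2}<6$, after enlarging $M$ one has $|(\ln V)'(s)|\leq\sqrt6\,\a_V$ for $|s|\geq M$ with some $\a_V\in(0,1)$ (so $V\in\mfP_{\a_V}^{0}$), and (\ref{eq:ln V biss est}) forces $(\ln V)''(s)\to 0$ as $s\to\pm\infty$. Since $\theta$ is unbounded and, by (\ref{eq:thetad}), $\theta_t=-\tfrac32\phi_t^{2}\leq 0$ hence monotone, Lemma~\ref{lemma:BianchiAdevelopment} shows $\theta$ is not integrable near either endpoint of $J$, so the coordinate $\tau$ of (\ref{eq:dtdtau}) ranges over $\ro$; moreover Proposition~\ref{prop:iso theta unbounded} gives $|\phi(\tau)|\to\infty$ as $\tau\to-\infty$, and no sign change of $\phi$ is possible for $\tau$ very negative (else $|\phi|$ would vanish along a sequence tending to $-\infty$), so $\phi\to+\infty$ or $\phi\to-\infty$. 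Now Proposition~\ref{prop:phi exc limit} gives the dichotomy: either all conclusions of Lemma~\ref{lemma: Bianchi I isotropic} hold, which is case~(iii); or (\ref{eq:phi exc limit}) holds, in which case $\phi_\tau\to-\lambda_+$ with $\phi\to+\infty$ (case~(i)) or $\phi_\tau\to-\lambda_-$ with $\phi\to-\infty$ (case~(ii)). These are mutually exclusive, since in case~(iii) one has $\theta^{-1}\phi_t\to\pm(2/3)^{1/2}$, i.e. $\phi_\tau\to\pm\sqrt6$, whereas $-\lambda_\pm\in(-\sqrt6,\sqrt6)\setminus\{0\}$.

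The substance of the theorem is that, up to time translation, there is exactly one solution of type~(i) and exactly one of type~(ii), each with image a smooth codimension-one submanifold of $B^{\iso}_{\mrI,+}$; by the symmetry $V(s)\mapsto V(-s)$ (which swaps $\lambda_+$ with $-\lambda_-$) it suffices to treat~(i). Eliminating $\theta$ by the Hamiltonian constraint (\ref{eq:hamconfin}) — on $\phi_\tau^{2}<6$, $9V\circ\phi/\theta^{2}=3-\phi_\tau^{2}/2$ — reduces the equations to the autonomous planar system $\phi_\tau=v$, $v_\tau=\tfrac12(v^{2}-6)\bigl(v+(\ln V)'(\phi)\bigr)$, cf. (\ref{eq:phi tau tau iso}). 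Along a type-(i) orbit $v<0$ near the singularity, so $\phi$ is strictly decreasing there and may be used as the independent variable; writing $v=\sfV(\phi)$ on a half-line $[\Phi_*,\infty)$ one obtains the scalar equation $d\sfV/d\phi=F(\sfV,\phi):=(\sfV^{2}-6)\bigl(\sfV+(\ln V)'(\phi)\bigr)/(2\sfV)$, and type~(i) is the condition $\sfV(\phi)\to-\lambda_+$ as $\phi\to\infty$. Here $(\ln V)'(\phi)=\lambda_++O(\phi^{-1})$ by (\ref{eq:ln V biss est}), and $\lim_{\phi\to\infty}\partial_{\sfV}F(-\lambda_+,\phi)=\mu_\infty:=(6-\lambda_+^{2})/(2\lambda_+)>0$. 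Writing $W=\sfV+\lambda_+$ and integrating the resulting equation $W'=\mu(\phi)W+q(W,\phi)+h(\phi)$ (with $\mu(\phi)\to\mu_\infty$, $q=O(W^{2})$, $h(\phi)=O(\phi^{-1})$) against the decaying kernel $e^{-\int_\phi^{s}\mu(r)\,dr}$ yields, by a contraction on $\{W:\sup_{\phi\geq\Phi_*}\phi|W(\phi)|<\infty\}$ of exactly the type underlying Lemma~\ref{lemma:abs exist and unique}, a unique solution with $W(\phi)\to0$; uniqueness of the orbit also follows directly from a Gronwall estimate ($\partial_{\sfV}F>0$ near the limit forces $|\sfV_1-\sfV_2|$ to grow as $\phi\to\infty$ unless $\sfV_1\equiv\sfV_2$), generalising the phase-line argument of Proposition~\ref{eq:gen isotropic Bianchi I exp pot} from the exact exponential case. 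Thus the type-(i) orbit is unique, giving exactly one solution up to time translation; along it $\theta$ is strictly monotone (its critical points being isolated, cf. Lemma~\ref{lemma:bth large enough}), so the orbit is an embedded $C^{\infty}$ curve, i.e. a codimension-one submanifold of the $2$-manifold $B^{\iso}_{\mrI,+}$.

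For the final assertion: if a development of data in $B^{\iso}_{\mrI,+}$ has a crushing singularity then $\theta\to\infty$ at $t_-$, so $\theta$ is unbounded and the trichotomy applies; conversely the two exceptional solutions also have crushing singularities, since on them $\theta_\tau=-\tfrac12\phi_\tau^{2}\theta\to-\tfrac12\lambda_\pm^{2}\theta$ gives exponential blow-up of $\theta$ as $\tau\to-\infty$, hence a finite $t_-$. Therefore, within the initial data in $B^{\iso}_{\mrI,+}$ whose developments have a crushing singularity, removing the two exceptional orbits leaves precisely the data for which all conclusions of Lemma~\ref{lemma: Bianchi I isotropic} hold. (These two exceptional solutions do not induce data on the singularity, as $\phi_t/\theta\to-\lambda_\pm/3\neq\pm(2/3)^{1/2}$ violates Definition~\ref{def:ndvacidonbbssh}.)

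The main obstacle is the uniqueness-and-submanifold step for the exceptional solutions: upgrading the elementary phase-line analysis valid for an exact exponential potential to a rigorous argument when $V$ is only asymptotically exponential, which means carrying out the phase-plane reduction carefully and fitting the scalar contraction (with its $O(\phi^{-1})$ forcing and positive linear rate $\mu_\infty$) into the framework of Lemma~\ref{lemma:abs exist and unique}.
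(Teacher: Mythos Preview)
Your proposal is correct; the trichotomy via Propositions~\ref{prop:iso theta unbounded} and \ref{prop:phi exc limit}, the mutual exclusivity argument, the submanifold statement (via strict monotonicity of $\theta$), and the final paragraph all match the paper's reasoning. The genuine difference is in how you establish existence and uniqueness of the exceptional orbits.

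The paper stays in $\tau$-time and works with the second-order equation directly: it first builds a ``scaffold'' $\xi$ solving the model first-order equation $\xi_\tau+(\ln V)'\circ\xi=0$, then constructs approximating solutions $\phi_n$ launched from the scaffold at times $\tau_n\to-\infty$, controls $\zeta_n=\phi_n-\xi$ by bootstrap, and extracts a limit. Uniqueness is then obtained by aligning two candidate solutions at a sequence of values $\Phi_l\to-\infty$ and running an energy estimate for $\varphi_l=\phi_{a,l}-\phi_{b,l}$ with carefully tuned weights $\ldr{\tau}^{2\a}$, $\ldr{\tau}^{2\b}$ (the paper takes $\a=15/8$, $\b=3/4$). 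Your route is considerably shorter: passing to $\phi$ as independent variable collapses the problem to a scalar non-autonomous ODE for $\sfV=\phi_\tau$ near the equilibrium $-\lambda_+$, whose linearisation has the repelling rate $\mu_\infty=(6-\lambda_+^2)/(2\lambda_+)>0$, so a contraction (with $O(\phi^{-1})$ forcing from $(\ln V)'(\phi)-\lambda_+$) on the weighted space $\sup_{\phi\geq\Phi_*}\phi|W(\phi)|<\infty$ yields existence, and the positivity of $\partial_{\sfV}F$ gives uniqueness immediately via Gronwall. What you gain is brevity and a clean dynamical picture; what the paper's approach retains is explicit $\tau$-time control of $\phi$ against the scaffold $\xi$, which feeds directly into the energy machinery without changing variables. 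One small caveat: your invocation of Lemma~\ref{lemma:abs exist and unique} is by analogy only---that lemma is set up for exponential weights in $\tau$, not polynomial weights in $\phi$---so in a full write-up you should state and verify the contraction in your setting rather than cite the lemma as is.
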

\begin{remark}\label{remark: only one direction V inf}
  If we replace the assumptions that $V(s)>0$ and that (\ref{eq:ln V biss est}) hold for $|s|\geq M$ with the assumption that these estimates hold only for
  $s\geq M$ and the assumption that $V$ is bounded for $s\leq 0$; and if we remove the assumption that $(\ln V)'(s)\rightarrow\lambda_-$ as
  $s\rightarrow -\infty$, then either \textit{(i)} or \textit{(iii)} have to hold. Finally, if $\bar{V}(s)=V(-s)$ satisfies these modified conditions,
  then either \textit{(ii)} or \textit{(iii)} have to hold.
\end{remark}
\begin{proof}
  Increasing $M$ appearing in the statement of the theorem, if necessary, there is an $\a_V\in (0,1)$ such that (\ref{eq:VprbyVbd}) and
  (\ref{eq:as exp or pol}) hold. This means that the conditions of Propositions~\ref{prop:iso theta unbounded} and \ref{prop:phi exc limit} are satisfied.
  We can therefore assume that (\ref{eq:phi exc limit}) holds and that $|\phi(\tau)|\rightarrow\infty$ as $\tau\rightarrow-\infty$. Due to
  (\ref{eq:phi exc limit}), $\phi_\tau(\tau)$ converges
  to either $-\lambda_+$ or to $-\lambda_-$ as $\tau\rightarrow-\infty$. In the first case, $\phi(\tau)$ tends to $\infty$ (essentially linearly),
  and in the second case, it tends to $-\infty$ (essentially linearly) as $\tau\rightarrow-\infty$. In particular, there is a time $T$ and a constant
  $C\geq 1$ such that $|\phi(\tau)|\geq M$ and 
  \[
  C^{-1}\ldr{\tau}\leq\ldr{\phi(\tau)}\leq C\ldr{\tau}
  \]
  for all $\tau\leq T$. Combining this estimate with (\ref{eq:ln V biss est}) yields
  \[
  \left|(\ln V)''\circ\phi(\tau)\right|\leq C\ldr{\tau}^{-2}
  \]
  for all $\tau\leq T$. Next, since $\phi_\tau(\tau)$ converges to $-\lambda_\pm$, it follows that $9V\circ\phi(\tau)/\theta^2(\tau)$ converges
  to $3-\lambda_\pm^2/2$; see (\ref{eq:exp norm constr}). Moreover, due to (\ref{eq:VprbyVbd}), $|\lambda_\pm|\leq \sqrt{6}\a_V$. This means
  that (\ref{eq:dtau phitau minus vproverv}) can be written
  \[
  \psi_\tau=-a\psi+b,
  \]
  where $a$ is a function converging to $3-\lambda_\pm^2/2>0$ as $\tau\rightarrow-\infty$ and $|b(\tau)|\leq C_b\ldr{\tau}^{-2}$ for all $\tau\leq T$
  and some constant $C_b$. Moreover, $\psi=\phi_\tau+(\ln V)'\circ\phi$. Since we know $\psi$ to converge to zero and since we can assume
  $a(\tau)\geq\a$ for all $\tau\leq T$, where $0<\a\in\rn{}$, this means that $|\psi(\tau)|\leq \a^{-1}C_b\ldr{\tau}^{-2}$ for all $\tau\leq T$.

  \textbf{A model for the asymptotics.} Next, we introduce a model for the asymptotic behaviour:
  \begin{equation}\label{eq:model solution iso}
    \xi_\tau+(\ln V)'\circ\xi=0.
  \end{equation}
  There is a unique (up to time translation) solution to this equation diverging to $-\infty$ as $\tau\rightarrow-\infty$, and there is a unique
  (up to time translation) solution diverging to $\infty$ as $\tau\rightarrow-\infty$. In order to justify the first statement, let $\xi_0$ be close
  enough to $-\infty$ that $(\ln V)'(s)\leq\lambda_-/2$ for all $s\leq\xi_0$. Consider the solution to (\ref{eq:model solution iso}) with $\xi(0)=\xi_0$.
  Since $\xi_\tau(\tau)\geq -\lambda_-/2>0$ if $\xi(\tau)\leq\xi_0$, it is clear that $\xi$ decreases to the past, so that $\xi(\tau)\leq\xi_0$ for all
  $\tau\leq 0$. Since $(\ln V)'(s)$ converges to $\lambda_-$ as $s\rightarrow-\infty$, it is clear that $\xi_\tau$ is bounded for $\tau\leq 0$. This
  means that $\xi$ exists globally to the past. This yields existence of the desired solution. In order to prove uniqueness (up to time translation),
  let $\xi_1$ and $\xi_2$ be two solutions. Then there are $M_i\in\rn{}$, $i=1,2$, such that the range of $\xi_i$ contains $(-\infty,M_i]$. Let
  $M_3\leq \min\{M_1,M_2\}$. Then there are $\tau_i$, $i=1,2$, in the existence intervals of $\xi_i$ such that $\xi_i(\tau_i)=M_3$. By uniqueness of
  solutions to (\ref{eq:model solution iso}), it follows that $\xi_{1}(\tau)=\xi_2(\tau+\tau_2-\tau_1)$. In other words, we have uniqueness up to
  time translation. The proof of the statement that there is a unique (up to time translation) solution diverging to $\infty$ as
  $\tau\rightarrow-\infty$ is similar.
  
  Fix a solution $\xi$ to (\ref{eq:model solution iso}) converging to $-\infty$ as $\tau\rightarrow-\infty$. Let $\max\{1,M\}\leq n_0\in\nn{}$ be such that
  $-n$ belongs to the range of $\xi$ for $n\geq n_0$. Assume, moreover, that $(\ln V)'(s)\leq\lambda_-/2$ for all $s\leq -n_0$ and note that $V(s)>0$ for
  $s\leq -n_0$. Next, let $\b_V:=(\a_V+1)/2$ and assume, in addition to the above, $n_0$ to be such that
  \begin{equation}\label{eq:nz restriction}
    C_V\ldr{n_0}^{-2}\leq 2(1-\b_V^2)(1-\b_V)<1-\b_V^2.
  \end{equation}
  Let $\tau_n$ be such that $\xi(\tau_n)=-n$. Then $\tau_n>\tau_{n+1}$ and $\tau_n\rightarrow-\infty$ as $n\rightarrow\infty$. By a time translation,
  it can be arranged that $\tau_{n_0}=2n_0/\lambda_-$. Then, for $\tau\leq\tau_{n_0}$,
  \[
  \xi(\tau_{n_{0}})-\xi(\tau)=-\textstyle{\int}_\tau^{\tau_{n_{0}}}(\ln V)'\circ\xi(s)ds\geq-\frac{\lambda_-}{2}(\tau_{n_{0}}-\tau)
  =-n_0+\frac{\lambda_-}{2}\tau.
  \]
  Since $\xi(\tau_{n_0})=-n_0$, this means that
  \begin{equation}\label{eq:xi upp bd}
    \xi(\tau)\leq -\lambda_-\tau/2
  \end{equation}
  for all $\tau\leq \tau_{n_0}$. In what follows, we use $\xi$ as a scaffold for constructing a solution as in \textit{(ii)}.
  
  \textbf{Existence of solutions.} We construct a solution as in \textit{(ii)} by taking a limit of a sequence of solutions. Moreover, the sequence is
  constructed by specifying the initial data at times closer and closer to $-\infty$ to equal those of the model solution (scaffold) constructed previously.
  More specifically, for $n\geq n_0$, let $\phi_n(\tau_n):=-n$ and $(\d_\tau\phi_n)(\tau_n):=\xi_\tau(\tau_n)$. Finally, define $\theta_n(\tau_n)>0$ by the
  requirement that (\ref{eq:exp norm constr}) hold. That this is possible follows from the fact that $(\d_\tau\phi_n)(\tau_n)^2\leq 6\a_V^2<6$. Let
  $(\phi_n,\theta_n)$ be the solution to (\ref{seq:EFEwrtvar})--(\ref{eq:phiddot}) corresponding to these data. Since $\theta$ is neither integrable to the
  future or to the past, see Lemma~\ref{lemma:BianchiAdevelopment}, the existence interval of the solution in $\tau$-time is $\rn{}$. Next, let
  \[
  \ma_n:=\{\tau\geq\tau_n:\phi_n(s)\leq-n_0,\ |(\d_\tau\phi_n)(s)|\leq \sqrt{6}\b_V\ \forall s\in [\tau_n,\tau]\}.
  \]
  Note that $\ma_n$ is closed, connected and non-emtpy ($\tau_n\in\ma_n$).
  Assume now that $\phi_n(\tau_*)<-n_0$ and that $\tau_*\in\ma_n$. If $|(\d_\tau\phi_n)(\tau_*)|<\sqrt{6}\b_V$, it is clear that $\ma_n$ contains an
  open neighbourhood of $\tau_*$ in $[\tau_n,\infty)$. Let us therefore assume that $|(\d_\tau\phi_n)(\tau_*)|=\sqrt{6}\b_V$. Let us first prove that
  $(\d_\tau\phi_n)(\tau_*)>0$. If not, $\d_\tau\phi_n$ has to have passed through $0$ in the interval $[\tau_n,\tau_*]$, since
  $(\d_\tau\phi_n)(\tau_n)=\xi_\tau(\tau_n)=-(\ln V)'(-n)>0$. Say that $(\d_\tau\phi_n)(\tau_a)=0$ for $\tau_n\leq\tau_a\leq\tau_*$ (so that
  $\tau_a\in\ma_n$) and that $\tau_a$ is the first element of $[\tau_n,\tau_*]$ with this property. Then (\ref{eq:exp norm constr}) and
  (\ref{eq:phi tau tau iso}) yield
  \[
  (\d_\tau^2\phi_n)(\tau_a)=-3(\ln V)'[\phi_n(\tau_a)]>0.
  \]
  On the other hand, since $(\d_\tau\phi_n)(\tau_a)=0$ and $(\d_\tau\phi_n)(s)\geq 0$ for $s\in [\tau_n,\tau_a]$, we have to have
  $(\d_\tau^2\phi_n)(\tau_a)\leq 0$. This yields a contradiction. To conclude, $\d_\tau\phi_n>0$ in $\ma_n$. This means that
  $(\d_\tau\phi_n)(\tau_*)=\sqrt{6}\b_V$. In particular,
  \[
  \tfrac{9V\circ\phi_n(\tau_*)}{\theta^2_n(\tau_*)}=3(1-\b_V^2),
  \]
  see (\ref{eq:exp norm constr}), and, due to (\ref{eq:VprbyVbd}), 
  \[
  \d_\tau\phi_n(\tau_*)+(\ln V)'[\phi_n(\tau_*)]\geq\sqrt{6}(\b_V-\a_V)=\sqrt{6}(1-\a_V)/2.
  \]
  Combining these observations with (\ref{eq:phi tau tau iso}) yields the conclusion that $\d_\tau^2\phi_n(\tau_*)<0$. Since $\tau_*>\tau_n$,
  $\d_\tau\phi_n(\tau)\leq\sqrt{6}\b_V$ for $\tau\in [\tau_n,\tau_*]$ and $\d_\tau\phi_n(\tau_*)=\sqrt{6}\b_V$, we, on the other hand, have
  $\d_\tau^2\phi_n(\tau_*)\geq 0$. This contradiction proves that $|(\d_\tau\phi_n)(\tau_*)|<\sqrt{6}\b_V$, so that $\ma_n$ contains an open
  neighbourhood of $\tau_*$. In particular, $\phi_n$ has to reach $-n_0$ before $|\d_\tau\phi_n|$ reaches $\sqrt{6}\b_V$. To conclude,
  $\phi_n$ is strictly increasing from $-n$ to $-n_0$ as $\tau$ increases. Moreover, $0<\d_\tau\phi_n<\sqrt{6}\b_V$ in the corresponding interval.
  Next, consider
  \[
  h_n:=\phi_n^2\left(\d_\tau\phi_n+(\ln V)'\circ\phi_n\right).
  \]
  Due to (\ref{eq:dtau phitau minus vproverv}),
  \[
  \d_\tau h_n= -\big(\tfrac{9V\circ\phi_n}{\theta_n^2}-2\tfrac{\d_\tau\phi_n}{\phi_n}-(\ln V)''\circ\phi_n\big)h_n
     -\phi_n^2\cdot(\ln V)''\circ\phi_n\cdot(\ln V)'\circ\phi_n.
  \]
  Note that $\phi_n\leq -n_0$ and that $\d_\tau\phi_n>0$ in $\ma_n$. Thus $\d_\tau\phi_n/\phi_n<0$ on $\ma_n$. Moreover, due to
  (\ref{eq:exp norm constr}), 
  \[
    \tfrac{9V\circ\phi_n(\tau)}{\theta^2_n(\tau)}-(\ln V)''\circ\phi_n(\tau)\geq 3(1-\b_V^2)-C_V\ldr{n_0}^{-2}>2(1-\b_V^2)
  \]
  for all $\tau\in \ma_n$, where we appealed to (\ref{eq:ln V biss est}) and (\ref{eq:nz restriction}) in the last two steps. Next,
  \[
  \left|\phi_n^2\cdot (\ln V)''\circ\phi_n\cdot (\ln V)'\circ\phi_n\right|\leq \sqrt{6}\a_VC_V.
  \]
  Moreover, $h_n(\tau_n)=0$ by definition. Combining these estimates, it is clear that
  \[
  |\phi_n^2(\d_\tau\phi_n+(\ln V)'\circ\phi_n)|=|h_n|\leq \tfrac{\sqrt{6}C_V}{2(1-\b_V^2)}
  \]
  in $\ma_n$. Denoting the right hand side by $C_\phi/2$, we conclude that
  \begin{equation}\label{eq:phin almost sol}
    \left|\d_\tau\phi_n+(\ln V)'\circ\phi_n\right|\leq C_\phi\ldr{\phi_n}^{-2}
  \end{equation}
  in $\ma_n$ (since $n_0\geq 1$). Next, let $\zeta_n:=\phi_n-\xi$ and compute
  \begin{equation}\label{eq:dtau zeta n}
    \d_\tau\zeta_n=\d_\tau\phi_n+(\ln V)'\circ\phi_n+(\ln V)'\circ\xi-(\ln V)'\circ\phi_n.
  \end{equation}
  The sum of the first two terms on the right hand side can be estimated by appealing to (\ref{eq:phin almost sol}). Consider
  \begin{equation}\label{eq:V prime over V diff}
    (\ln V)'\circ\xi-(\ln V)'\circ\phi_n=-\textstyle{\int}_0^1\left(\ln V\right)''(s\xi+(1-s)\phi_n)ds\cdot\zeta_n.
  \end{equation}
  Combining the above observations, there are functions $f_n$ and $g_n$ such that
  \begin{equation}\label{eq:zeta n evolution}
    \d_\tau\zeta_n=f_n\zeta_n+g_n.
  \end{equation}
  Let $\mB_n$ be the set of $\tau\in\ma_n\cap (-\infty,\tau_{n_0}]$ such that $|\zeta_n(s)|\leq 2$ for all $s\in[\tau_n,\tau]$. Then there
  is a constant $C_f'$, depending only on $C_V$ and $\a_V$, and a constant $C_f$, depending only on $C_V$, $\a_V$ and $\lambda_-$, such that 
  \[
  |f_n(\tau)|+|g_n(\tau)|\leq C_f'\ldr{\xi(\tau)}^{-2}\leq C_f'\ldr{-\lambda_-\tau/2}^{-2}\leq C_f\ldr{\tau}^{-2}
  \]
  for all $\tau\in\mB_n$, where we appealed to (\ref{eq:xi upp bd}) in the second to last step. Estimate
  \begin{equation}\label{eq:dtau zeta n energy}
    \d_{\tau}(\zeta_n^2+1)=2\zeta_n\d_\tau\zeta_n=2f_n\zeta_n^2+2g_n\zeta_n\leq 2C_f\ldr{\tau}^{-2}(\zeta_n^2+1)
  \end{equation}
  for all $\tau\in\mB_n$. Since $\zeta_n(\tau_n)=0$, this means that
  \[
  \zeta_n(\tau)^2+1\leq \exp\left(\textstyle{\int}_{\tau_n}^\tau 2C_f\ldr{s}^{-2}ds\right).
  \]
  Assuming $n_0$ to be large enough, the bound depending only on $C_V$, $\a_V$ and $\lambda_-$, we can assume the right hand side to be bounded
  from above by $2$ for all $\tau\leq\tau_{n_0}$. By a bootstrap argument, it can therefore be verified that $\mB_n=\ma_n\cap (-\infty,\tau_{n_0}]$.
  Moreover, $|\zeta_n(\tau)|\leq 1$ for all $\tau\in\mB_n$. 
  Recall that, by definition, $\xi(\tau_{n_0})=-n_0$ and that if $\sigma_n$ is the right end point of $\ma_n$, then $\phi_n(\sigma_n)=-n_0$.
  If $\tau_{n_0}\leq\sigma_n$, then $\mB_n=[\tau_n,\tau_{n_0}]$. If $\sigma_n\leq\tau_{n_0}$, then
  \[
  -1-n_0= -1+\phi_n(\sigma_n)\leq\xi(\sigma_n)\leq\xi(\tau_{n_0})=-n_0.
  \]
  Thus
  \[
  -\tfrac{\lambda_-}{2}(\tau_{n_0}-\sigma_n)\leq \textstyle{\int}_{\sigma_n}^{\tau_{n_0}}\xi_\tau(s)ds=\xi(\tau_{n_0})-\xi(\sigma_n)\leq 1,
  \]
  so that $u_{n_0}\leq\sigma_n$, where $u_{n_0}:=\tau_{n_0}+2/\lambda_-$. This means that $\mC_n:=[\tau_n,\tau_{n_0}+2/\lambda_-]\subset\mB_n$. For
  $k\geq n$, $\phi_k$ is uniformly bounded on $\mC_n$; $|\phi_k(\tau)|\leq 1+|\xi(\tau)|$. Moreover, since $|\d_\tau\phi_k|\leq\sqrt{6}\b_V$
  on $\mC_n$, it is clear that $|\d_\tau\phi_k|$, $\theta_k$ and $1/\theta_k$ are uniformly bounded on $\mC_n$; see
  (\ref{eq:exp norm constr}). Taking a subsequence, say $k_l$, it is clear that $\theta_{k_l}(u_{n_0})$, $\phi_{k_l}(u_{n_0})$ and
  $(\d_\tau\phi_{k_l})(u_{n_0})$ converge as $l\rightarrow\infty$. Denote the limits by $\theta_{0,*}$, $\phi_{0,*}$ and $\phi_{1,*}$. The limits satisfy
  the constraint equation (\ref{eq:exp norm constr}). Let $(\theta_*,\phi_*)$ be the solution to (\ref{seq:EFEwrtvar})--(\ref{eq:phiddot})
  corresponding to the initial data $\theta_{0,*}$, $\phi_{0,*}$ and $\phi_{1,*}$ at $u_{n_0}$. Due to continuous dependence on initial data,
  $(\theta_{k_l},\phi_{k_l},\d_\tau\phi_{k_l})$ converges uniformly to $(\theta_*,\phi_*,\d_\tau\phi_*)$ on any compact subset of $(-\infty,u_{n_0}]$.
  Combining this observation with (\ref{eq:phin almost sol}), it is clear that
  \[
    \left|\d_\tau\phi_*(\tau)+(\ln V)'\circ\phi_*(\tau)\right|\leq C_\phi\ldr{\phi_*(\tau)}^{-2}
  \]
  for all $\tau\leq u_{n_0}$. Note also that $|\phi_*(\tau)-\xi(\tau)|\leq 1$ for all $\tau\leq u_{n_0}$. 

  \textbf{Uniqueness of solutions.}
  Next, we wish to prove that $\phi_*$ is the unique (up to time translation) solution such that $\phi_*(\tau)\rightarrow-\infty$ as $\tau\rightarrow-\infty$
  and such that (\ref{eq:phi exc limit}) holds. Assume, to this end, that there are two such solutions, say $(\theta_a,\phi_a)$
  and $(\theta_b,\phi_b)$. By the arguments presented at the beginning of the proof, there are constants $C_a$ and $T_a$ such that
  \begin{equation}\label{eq:phi a asympt}
  \left|\d_\tau\phi_a(\tau)+(\ln V)'\circ\phi_a(\tau)\right|\leq C_a\ldr{\phi_a(\tau)}^{-2}
  \end{equation}
  for all $\tau\leq T_a$. An analogous statement holds for $\phi_b$. Since $\phi_a$ converges to $-\infty$, we can, without loss of
  generality, assume that $\phi_a(\tau)\leq -n_0$ and $\d_\tau\phi_a(\tau)\geq-\lambda_-/2$ for all $\tau\leq T_a$, and similarly for $\phi_b$.
  Thus, $\phi_a$ and $\phi_b$ can be assumed to be strictly decreasing to the past. Let
  \[
  \Phi_0\in \phi_a((-\infty,T_a])\cap \phi_b((-\infty,T_b])\cap \xi((-\infty,\tau_{n_0}]).
  \]
  Let $\tau_\Phi\in (-\infty,\tau_{n_0}]$ be such that $\xi(\tau_\Phi)=\Phi_0$ and modify $\phi_a$ and $\phi_b$ by a time translation so that 
  $\phi_a(\tau_\Phi)=\phi_b(\tau_\Phi)=\Phi_0$. Let $T:=\min\{T_a,T_b,\tau_{n_0}\}$. Introduce $\zeta_a:=\phi_a-\xi$
  and define $\zeta_b$ analogously. Then $\zeta_a$ and $\zeta_b$ satisfy (\ref{eq:dtau zeta n}) with $n$ replaced by $a$ and $b$ respectively.
  This means that they also satisfy (\ref{eq:zeta n evolution}) with $n$ replaced by $a$ and $b$ for suitable choices of $f_a$, $g_a$, $f_b$
  and $g_b$. Define $\mC_a$ to be the set of $\tau\leq T$ such that for all $s$ between $\tau$ and $\tau_\Phi$, $|\zeta_a(s)|\leq 1$. Note that
  $\mC_a$ is closed, connected and non-empty. Due to (\ref{eq:ln V biss est}), (\ref{eq:xi upp bd}), (\ref{eq:phi a asympt}),
  \[
  f_a(\tau):=-\textstyle{\int}_0^1(\ln V)''(s\xi(\tau)+(1-s)\phi_a(\tau))ds,
  \]
  and the fact that $|\zeta_a(\tau)|\leq 1$ for $\tau\in\mC_a$,
  \[
  |f_a(\tau)|+|g_a(\tau)|\leq C\ldr{\tau}^{-2}
  \]
  for $\tau\in \mC_a$, where $C$ only depends on $C_V$, $\lambda_-$ and $C_a$. In analogy with (\ref{eq:dtau zeta n energy}),
  \[
  |\d_\tau(\zeta_a^2+1)(\tau)|\leq 2C\ldr{\tau}^{-2}(\zeta_a^2+1)(\tau)
  \]
  for all $\tau\in\mC_a$. Integrating this estimate yields
  \begin{equation}\label{eq:zeta a energy esti}
    \zeta_a^2(\tau)+1\leq \exp\big(-\tfrac{2C}{\max\{\tau,\tau_\Phi\}}\big)
  \end{equation}
  for $\tau\in\mC_a$, where we used the fact that $\zeta_a(\tau_\Phi)=0$. Assuming $T$ to be close enough to $-\infty$ that $-2C/T<\ln 2$,
  a bootstrap argument can be used to conclude that $|\zeta_a(\tau)|\leq 1$ for $\tau\leq T$. An identical argument yields the same
  conclusion for $\zeta_b$. Moreover, (\ref{eq:zeta a energy esti}) and a similar estimate for $\zeta_b$ hold for $\tau\leq T$. This
  means, in particular, that there is a constant depending only on $C_V$, $\lambda_-$, $C_a$ and $C_b$ such that
  \[
  |\zeta_a(\tau)|+|\zeta_b(\tau)|\leq C\ldr{\tau_\Phi}^{-1/2}
  \]
  for all $\tau\leq\tau_\Phi$, where we used the fact that $|e^x-1|\leq |x|e^{|x|}$. This means, in particular, that if $\varphi:=\phi_a-\phi_b$, then
  $|\varphi(\tau)|\leq 2$ for all $\tau\leq T$ and $|\varphi(\tau)|\leq C\ldr{\tau_\Phi}^{-1/2}$ for all $\tau\leq \tau_\Phi$. Let $\{\Phi_l\}$ be a
  decreasing sequence
  such that $\Phi_1\leq \Phi_0-1$ and $\Phi_l\rightarrow-\infty$. Let $\tau_l$ be such that $\xi(\tau_l)=\Phi_l$. Since $\phi_a(\tau_\Phi)=\xi(\tau_\Phi)=\Phi_0$ and
  $\d_\tau\phi_a(\tau)\geq-\lambda_-/2$ for $\tau\leq T$, there is a $\tau_{a,l}\leq\tau_\Phi$ such that $\phi_a(\tau_{a,l})=\Phi_l$. Moreover,
  \[
  -\frac{\lambda_-}{2}|\tau_l-\tau_{a,l}|\leq |\phi_a(\tau_{a,l})-\phi_a(\tau_l)|=|\xi(\tau_l)-\phi_a(\tau_l)|\leq C\ldr{\tau_\Phi}^{-1/2}.
  \]
  Assuming $T$ to be close enough to $-\infty$, the bound depending only on $C_V$, $\lambda_-$ and $C_a$, we can assume $|\tau_l-\tau_{a,l}|\leq 1/2$.
  We can define $\tau_{b,l}$ similarly and prove that $|\tau_l-\tau_{b,l}|\leq 1/2$. Define
  \[
  \phi_{a,l}(\tau):=\phi_a(\tau+\tau_{a,l}-\tau_l),\ \ \
  \phi_{b,l}(\tau):=\phi_b(\tau+\tau_{b,l}-\tau_l).
  \]
  By taking subsequences, if necessary, we can assume $\tau_{a,l}-\tau_l$ and $\tau_{b,l}-\tau_l$ to converge to limits, say $\de_a$ and $\de_b$
  respectively. Define $\phi_{a,*}(\tau):=\phi_a(\tau+\de_a)$ and $\phi_{b,*}(\tau):=\phi_b(\tau+\de_b)$. Note that
  \begin{equation*}
    \begin{split}
      |\phi_{a,l}(\tau)-\phi_{b,l}(\tau)| \leq & |\phi_a(\tau+\tau_{a,l}-\tau_l)-\xi(\tau+\tau_{a,l}-\tau_l)|
      +|\xi(\tau+\tau_{a,l}-\tau_l)-\xi(\tau+\tau_{b,l}-\tau_l)|\\
      & +|\xi(\tau+\tau_{b,l}-\tau_l)-\phi_b(\tau+\tau_{b,l}-\tau_l)|\leq 2+\sqrt{6}\a_V
    \end{split}
  \end{equation*}
  for all $\tau\leq T-1$. In particular, there is a numerical bound on the difference. On the other hand, $\phi_{a,l}(\tau_l)=\phi_{b,l}(\tau_l)$.
  Note also that (\ref{eq:xi upp bd}), (\ref{eq:phi a asympt}), $|\zeta_a(\tau)|\leq 1$ for $\tau\leq T$ and $|\tau_l-\tau_{a,l}|\leq 1/2$ yield the
  conclusion that
  \begin{equation}\label{eq:dtau phia plus V prime over V}
    \left|\d_\tau\phi_{a,l}(\tau)+(\ln V)'\circ\phi_{a,l}(\tau)\right|\leq C\ldr{\tau}^{-2}
  \end{equation}
  for all $\tau\leq T-1$, where $C$ only depends on $C_a$ and $\lambda_-$. There is a similar estimate for $\phi_{b,l}$.
  Since $\phi_{a,l}(\tau_l)=\phi_{b,l}(\tau_l)$, this means that 
  \begin{equation}\label{eq:dtau varphi l ini}
    |(\d_\tau\phi_{a,l})(\tau_l)-(\d_\tau\phi_{b,l})(\tau_l)|\leq C\ldr{\tau_l}^{-2},
  \end{equation}
  where $C$ only depends on $C_a$, $C_b$ and $\lambda_-$. Next, we wish to compare $\phi_{a,l}$ and $\phi_{b,l}$ on $[\tau_l,T-1]$. 
  Note, to this end, that combining (\ref{eq:exp norm constr}) and (\ref{eq:phi tau tau iso}) yields
  \begin{equation}\label{eq:phi tau tau autonomous eq}
    \phi_{\tau\tau}=-\left(3-\phi_\tau^2/2\right)\left(\phi_\tau+(\ln V)'\circ\phi\right). 
  \end{equation}
  Since $\phi_{a,l}$ and $\phi_{b,l}$ both satisfy this equation, it can be deduced that $\varphi_l:=\phi_{a,l}-\phi_{b,l}$ satisfies
  \begin{equation}\label{eq:varphi l tau tau}
    \begin{split}
      \d_{\tau}^{2}\varphi_l = & (\d_\tau\phi_{a,l}/2+\d_\tau\phi_{b,l}/2)
      \left(\d_\tau\phi_{a,l}+(\ln V)'\circ\phi_{a,l}\right)\d_\tau\varphi_l\\
      & -\left(3-(\d_\tau\phi_{b,l})^2/2\right)
      \left[\d_\tau\varphi_l+(\ln V)'\circ\phi_{a,l}-(\ln V)'\circ\phi_{b,l}\right].
    \end{split}
  \end{equation}
  We can assume $T$ to be such that $|\d_\tau\phi_a(\tau)|\leq\sqrt{6}\b_V$ for $\tau\leq T$, and similarly for $\phi_b$.
  Finally, due to (\ref{eq:ln V biss est}), (\ref{eq:xi upp bd}), $|\zeta_a(\tau)|\leq 1$ for $\tau\leq T$,
  $|\zeta_b(\tau)|\leq 1$ for $\tau\leq T$, $|\tau_l-\tau_{a,l}|\leq 1/2$ and $|\tau_l-\tau_{b,l}|\leq 1/2$, 
  \begin{equation}\label{eq:V prime over V diff phi a l}
    \left|(\ln V)'\circ\phi_{a,l}(\tau)-(\ln V)'\circ\phi_{b,l}(\tau)\right|
    \leq C\ldr{\tau}^{-2}|\varphi_l(\tau)|
  \end{equation}
  for all $\tau\leq T-1$, where $C$ only depends on $C_V$ and $\lambda_-$ and we rewrite the difference in analogy with
  (\ref{eq:V prime over V diff}). Introduce
  \[
  \mfE_l:=\frac{1}{2}\ldr{\tau}^{2\a}(\d_\tau\varphi_l)^2+\frac{1}{2}\ldr{\tau}^{2\b}\varphi_l^2,
  \]
  where $\a,\b\geq 0$ are constants that remain to be determined. Note that since $\tau\leq 0$, $\d_\tau\ldr{\tau}^{2\a}\leq 0$ for $\a\geq 0$.
  Due to (\ref{eq:dtau phia plus V prime over V}); $|\d_\tau\phi_{a,l}(\tau)|\leq\sqrt{6}\b_V$ and $|\d_\tau\phi_{b,l}(\tau)|\leq\sqrt{6}\b_V$
  for $\tau\leq T-1$ (so that $3-(\d_\tau\phi_{b,l})^2/2>0$); (\ref{eq:varphi l tau tau}); and (\ref{eq:V prime over V diff phi a l})  
  it can be deduced that
  \begin{equation}\label{eq:dtau mfEl}
    \d_\tau\mfE_l(\tau)\leq -\g_V\ldr{\tau}^{2\a}(\d_\tau\varphi_l)^2
    +C\ldr{\tau}^{-2}\mfE_l(\tau)+C\ldr{\tau}^{\b-\a}\mfE_l(\tau)+C\ldr{\tau}^{2\a-2}|\d_\tau\varphi_l(\tau)||\varphi_l(\tau)|
  \end{equation}
  for $\tau\leq T-1$, where $C$ only depends on $C_a$, $C_b$, $C_V$ and $\lambda_-$. Here $\g_V:=3(1-\b_V^2)>0$.
  In order to estimate the last term on the right hand side of (\ref{eq:dtau mfEl}), note that
  \begin{equation*}
    \begin{split}
      & \sqrt{\g_V}\ldr{\tau}^{\a}|\d_\tau\varphi_l(\tau)|\cdot \g_V^{-1/2}C\ldr{\tau}^{\a-\b-2}\ldr{\tau}^{\b}|\varphi_l(\tau)|\\
      \leq & \frac{1}{2}\g_V\ldr{\tau}^{2\a}(\d_\tau\varphi_l)^2+\frac{1}{2}C^2\g_V^{-1}\ldr{\tau}^{2\a-2\b-4}\ldr{\tau}^{2\b}\varphi_l(\tau)^2.
    \end{split}
  \end{equation*}
  Combining the last two estimates yields
  \[
  \d_\tau\mfE_l(\tau)\leq C\ldr{\tau}^{-2}\mfE_l(\tau)+C\ldr{\tau}^{\b-\a}\mfE_l(\tau)+C\ldr{\tau}^{2\a-2\b-4}\mfE_l
  \]
  for $\tau\leq T-1$, where $C$ only depends on $C_a$, $C_b$, $C_V$, $\a_V$ and $\lambda_-$.
  In order for $\mfE_l$ not to change too much, we want
  to have $\b-\a<-1$ and $2\a-2\b-4<-1$. Since $\varphi_l(\tau_l)=0$ and (\ref{eq:dtau varphi l ini}) holds, we also know that
  $\mfE_l(\tau_l)\leq C\ldr{\tau_l}^{2\a-4}$, where $C$ only depends on $C_a$, $C_b$ and $\lambda_-$. In order for this to tend to zero as
  $l\rightarrow\infty$, we need to have $\a<2$. However, $\b=3/4$ and $\a=15/8$ satisfy all the above requirements. With these choices,
  \[
  \mfE_l(\tau_l)\leq C\ldr{\tau_l}^{-1/4},\ \ \
  \d_\tau\mfE_l(\tau)\leq C\ldr{\tau}^{-9/8}\mfE_l(\tau)
  \]
  for all $\tau\leq T-1$. This means that $\mfE_l(\tau)\leq C\ldr{\tau_l}^{-1/4}$ for all $\tau\in [\tau_l,T-1]$. Thus
  \[
  \lim_{l\rightarrow\infty}[\phi_{a,l}(T-1),(\d_\tau\phi_{a,l})(T-1)]=\lim_{l\rightarrow\infty}[\phi_{b,l}(T-1),(\d_\tau\phi_{b,l})(T-1)].
  \]
  This means that $\phi_{a,*}(T-1)=\phi_{b,*}(T-1)$ and similarly for the first derivative. Since $\phi_{a,*}$ and $\phi_{b,*}$ both satisfy
  (\ref{eq:phi tau tau autonomous eq}), it follows that they are equal. Thus $\phi_a$ is a time translation of $\phi_b$. That the
  image of the corresponding solution is a smooth submanifold follows by an argument identical to the one presented in the proof
  of Proposition~\ref{prop:iso theta bounded}. This finishes the proof of \textit{(ii)}. In order to prove \textit{(i)}, it is sufficient to
  note that introducing $\bar{\phi}:=-\phi$ and $\bar{V}(s):=V(-s)$, case \textit{(i)} can be reduced to case \textit{(ii)}.

  Next, let us prove the statements in Remark~\ref{remark: only one direction V inf}. Proposition~\ref{prop:iso theta unbounded} still applies
  and leads to the conclusion that $|\phi(\tau)|\rightarrow\infty$ as $\tau\rightarrow-\infty$. If $\phi(\tau)\rightarrow-\infty$, then, since
  $V(s)$ is bounded for $s\leq 0$, we conclude that $V\circ\phi(\tau)/\theta^2(\tau)\rightarrow 0$, so that all the conclusions of
  Lemma~\ref{lemma: Bianchi I isotropic} apply. If $\phi(\tau)\rightarrow\infty$, then we  can reduce the argument to the above (alternately, we
  can modify the potential in regimes that are irrelevant for the dynamics in such a way that the exact statement of the theorem can be applied). 
\end{proof}

\section[The $k=-1$ case]{The spatially homogeneous and isotropic setting; the case of negative spatial curvature}\label{section: k negative case}

Let $(\bM,\bge_{-})$ be a complete hyperbolic $3$-manifold with constant scalar curvature $-6$. This means that $\mathrm{Ric}[\bge_{-}]=-2\bge_{-}$.
Let $J$ be an open interval, $a\in C^\infty(J,(0,\infty))$, $\phi\in C^\infty(J)$ and consider a metric of the form
\begin{equation}\label{eq:hyperbolic version of metric}
  g:=-dt\otimes dt+a^{2}(t)\bge_{-}
\end{equation}
on $M:=\bM\times J$. Finally, let $V\in C^{\infty}(\ro)$. The Einstein non-linear scalar field equations for $(M,g,\phi)$ read
\begin{subequations}\label{seq:ESF hyperbolic setting}
  \begin{align}
    \tfrac{a_{tt}}{a} = & -\tfrac{1}{3}\phi_{t}^{2}+\tfrac{1}{3}V\circ\phi,\label{eq:att k minus one}\\
    \phi_{tt} = & -\theta\phi_{t}-V'\circ\phi,\label{eq:sca field eq k minus one}\\
    \tfrac{2}{3}\theta^{2} = & \phi_{t}^{2}+\tfrac{6}{a^{2}}+2V\circ\phi,\label{eq:hamcon k minus one}
  \end{align}  
\end{subequations}
where $\theta:=3a_{t}/a$ denotes the mean curvature; see \cite[(220), p.~214]{RinPL}. Given initial data
as in Definition~\ref{def:id k minus one}, there is a unique corresponding maximal development in the sense of Definition~\ref{def:k eq minus one dev}. In
fact, the following holds.

\begin{lemma}\label{lemma:ex un k minus one}
  Let $V\in C^{\infty}(\ro)$ and $\mfI=(\bM,\bge,\bk,\bphi_0,\bphi_1)\in\mN[V]$. Then there is a unique (up to time translation) spatially locally homogeneous
  and isotropic non-linear scalar field development of $\mfI$, say $(M,g,\phi)$ with $M=\bM\times J$ and $J=(t_-,t_+)$, which is maximal in the sense that
  either $t_\pm=\pm\infty$ or $|\theta|+|\phi|$ is unbounded as $t\rightarrow t_\pm$, where $\theta$ denotes the mean curvature of the constant-$t$
  hypersurfaces. Moreover, this development is globally hyperbolic. 

  If, in addition to the above, $V\geq 0$, $J$ can be assumed to equal $(0,\infty)$. Moreover, there is a unique $t_0\in J$ such that if $g$ is given by
  (\ref{eq:hyperbolic version of metric}), where $\bge_-$ is the constant multiple of $\bge$ with scalar curvature $-6$, then the initial data induced on
  $\bM_{t_0}:=\bM\times\{t_0\}$ by $(M,g,\phi)$ and pulled back by $\iota_{t_0}$ equal $\mfI$, where $\iota_{t_0}(p):=(p,t_0)$. Moreover, this solution
  has a crushing singularity at $t=0$ and is such that $\theta\notin L^{1}(0,1)$ and $\theta\notin L^{1}(1,\infty)$.
\end{lemma}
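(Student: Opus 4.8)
The existence and uniqueness of a maximal globally hyperbolic development is essentially a standard ODE argument, parallel to the proof of Proposition~\ref{prop:unique max dev}; the substance of the final statement is the claim that, when $V\geq 0$, the existence interval can be taken to be $(0,\infty)$ with a crushing singularity at $t=0$ and with $\theta$ non-integrable at both ends. The plan is to derive the relevant monotonicity and blow-up properties directly from the symmetry-reduced system (\ref{seq:ESF hyperbolic setting}).

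First I would establish existence and uniqueness. Given $\mfI=(\bM,\bge,\bk,\bphi_0,\bphi_1)\in\mN[V]$, write $\bge=\a^2\bge_-$ where $\bge_-$ has scalar curvature $-6$, and note $\bk$ is a non-negative multiple of $\bge$, say $\bk=\b\bge$ with $\b\geq 0$, so that $\theta=3\b\geq 0$ initially; the Hamiltonian constraint (\ref{eq:ham con id k minus one}) translates into (\ref{eq:hamcon k minus one}) evaluated at the initial time. Solving the ODE system (\ref{eq:att k minus one})--(\ref{eq:sca field eq k minus one}) (equivalently (\ref{eq:att k minus one}) and (\ref{eq:sca field eq k minus one}) for $a$ and $\phi$, with (\ref{eq:hamcon k minus one}) propagated as a constraint) with the initial data determined by $\mfI$ yields a unique local solution, hence a development of the form (\ref{eq:hyperbolic version of metric}); global hyperbolicity and the maximality dichotomy follow exactly as in Proposition~\ref{prop:unique max dev}, the criterion ``either $t_\pm=\pm\infty$ or $|\theta|+|\phi|$ unbounded'' being precisely what guarantees maximality. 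The uniqueness argument is the same: two developments induce the same data at $t_0$, produce solutions to the same ODE system with the same initial data, hence coincide, and the existence intervals agree by the maximality criterion.

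Next, assuming $V\geq 0$, I would analyse the monotonicity. From (\ref{eq:hamcon k minus one}), $\theta^2=\tfrac{3}{2}\phi_t^2+\tfrac{9}{a^2}+3V\circ\phi\geq \tfrac{9}{a^2}>0$, so $\theta$ never vanishes; combined with $\theta\geq 0$ initially and continuity, $\theta>0$ throughout, hence $a_t>0$ and $a$ is strictly increasing. Moreover from (\ref{eq:att k minus one}) one computes $\theta_t=3a_{tt}/a-3a_t^2/a^2=-\phi_t^2+V\circ\phi-\tfrac{1}{3}\theta^2$, and substituting (\ref{eq:hamcon k minus one}) gives $\theta_t=-\tfrac{3}{2}\phi_t^2-\tfrac{3}{a^2}<0$ (this is the $k=-1$ analogue of (\ref{eq:thetad}) with the spatial curvature term now contributing negatively). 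Hence $\theta$ is strictly decreasing, so $a_t$ decreases and $a$ cannot blow up in finite time to the future, which (via the maximality dichotomy and the boundedness of $\phi$: since $\phi_t^2\leq\tfrac{2}{3}\theta^2$ and $\theta$ is bounded on bounded future intervals) gives $t_+=\infty$. To the past, since $\theta$ increases as $t$ decreases and $a$ decreases towards $0$, the term $3/a^2$ in $\theta^2$ forces $\theta\to\infty$ in finite backward time: more precisely, $\theta_t=-\tfrac{3}{2}\phi_t^2-3/a^2\leq -3/a^2$, and pairing this with $a_t=\tfrac13\theta a$ one can show $a\to 0$ and $\theta\to\infty$ at a finite $t_-$, which we normalise to $0$. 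This gives the crushing singularity. Finally, non-integrability of $\theta$ near $0$ follows by the argument already used at the end of the proof of Lemma~\ref{lemma:BianchiAdevelopment}: writing $\Theta(t)=\int_1^t\theta$, one checks $\tfrac{d}{dt}(e^{\Theta}\theta)=e^\Theta(\theta^2+\theta_t)=e^\Theta(\theta^2-\tfrac32\phi_t^2-3/a^2)\geq 0$ for $t$ small (since $\theta^2\to\infty$ while $\tfrac32\phi_t^2+3/a^2\leq\theta^2$ with comparable size only through $3/a^2$ — one needs to track that $\theta^2$ dominates), so $e^\Theta\theta$ has a finite limit as $t\downarrow 0$ while $\theta\to\infty$, forcing $\Theta(t)\to-\infty$; non-integrability at $\infty$ follows from $\theta_t\geq-\theta^2$ and the bound $\theta(t)\geq\theta(1)\exp(-\int_1^t\theta)$ exactly as in Lemma~\ref{lemma:BianchiAdevelopment}.

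The main obstacle I anticipate is the backward blow-up argument: showing rigorously that $a\to 0$ and $\theta\to\infty$ at a \emph{finite} $t_-$, rather than only as $t\to-\infty$. The cleanest route is probably to combine $a_t=\tfrac13\theta a$ with the lower bound $\theta_t\leq -3/a^2$ and derive a differential inequality for $a$ alone (or for $1/a$) that blows up in finite time; alternatively one can change to the time coordinate $\tau$ with $d\tau/dt=\theta/3$ as in Lemma~\ref{lemma:tautimecoord} and show the $\tau$-range is $(-\infty,\cdot]$, then translate back. One must also be slightly careful that $\phi$ does not itself blow up before $\theta$ does, but the constraint $\phi_t^2\leq\tfrac23\theta^2$ controls $\phi_t$ in terms of $\theta$, so this is routine once the blow-up rate of $\theta$ is pinned down.
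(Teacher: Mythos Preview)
Your proposal is correct and follows essentially the same route as the paper, which defers the $V\geq 0$ statements to an argument ``similar to, but simpler than'' Lemma~\ref{lemma:BianchiAdevelopment}; you have simply spelled those details out. Two remarks to clean things up: your anticipated obstacle evaporates, since the constraint gives $\theta a\geq 3$ and hence $a_t=\tfrac{1}{3}\theta a\geq 1$, forcing $a$ to reach $0$ (and thus $\theta\geq 3/a\to\infty$) in finite backward time; and your hedge in the non-integrability argument near $0$ is unnecessary, because $\theta^2+\theta_t=6/a^2+3V\circ\phi\geq 0$ holds identically, so $e^{\Theta}\theta$ is monotone for all $t$. (Also, ``$a_t$ decreases'' is not literally true --- what you need and have is that $\theta=3a_t/a$ is bounded to the future, which controls $a$ exponentially on bounded intervals.)
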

\begin{proof}
  Let $\a>0$ and $\b\geq 0$ be defined by the conditions that $\bge=\a^2\bge_-$ and $\bk=\a\b\bge_-$, where $\bge_-$ is a metric satisfying
  $\mathrm{Ric}[\bge_{-}]=-2\bge_{-}$. Let $(a,\phi)$ be the solution to (\ref{eq:att k minus one}) and (\ref{eq:sca field eq k minus one}) with initial data
  $a(0)=\a$, $a_t(0)=\b$, $\phi(0)=\bphi_0$ and $\phi_t(0)=\bphi_1$. Let $J=(t_-,t_+)$ be the maximal existence interval on which $a>0$. Define
  \begin{equation}\label{eq:Hdef k minus one}
    H:=\tfrac{2}{3}\theta^{2}-\phi_{t}^{2}-\tfrac{6}{a^{2}}-2V\circ\phi,
  \end{equation}
  where $\theta=3a_t/a$. It can then be verified that $H_t=-2\theta H/3$. Moreover, since $\mfI$ satisfies (\ref{eq:ham con id k minus one}), we know that
  (\ref{eq:hamcon k minus one}) holds with $\theta$, $a$, $\phi$ and $\phi_t$ replaced by $3\b/\a$, $\a$, $\bphi_0$ and $\bphi_1$ respectively, so that
  $H(0)=0$.
  Combining these two observations yields the conclusion that $H$ vanishes identically. Thus (\ref{eq:hamcon k minus one}) holds for all $t\in J$. To
  conclude, $(M,g,\phi)$, where $g$ is given by (\ref{eq:gSH bge minus}), is a solution to the Einstein non-linear scalar field equations. The uniqueness
  statement follows from the uniqueness (up to time translation) of solutions to  (\ref{eq:att k minus one}) and (\ref{eq:sca field eq k minus one}), given
  initial data, and the fact that the initial data for these equations are determined by $\mfI$. The fact that $(M,g)$ is a globally hyperbolic manifold
  follows from the fact that $(\bM,\bge_-)$ is a complete manifold and an argument similar to the one presented in the proof of
  \cite[Proposition~20.3, p.~215]{RinCauchy}. In order to prove the characterisation of $t_\pm$, assume that $t_+<\infty$ and that $\phi$ and $\theta$ are
  bounded to the future. Since $a_t/a=\theta/3$, it follows that $\ln a$ remains bounded to the future. Moreover, $\phi_t$ remains bounded to the future due
  to (\ref{eq:sca field eq k minus one}). To conclude, $\phi$, $\phi_t$, $\ln a$ and $\theta$ remain bounded to the future. This means that the solution can
  be extended to the future, contradicting the assumption that $J$ is the maximal existence interval. The argument in the opposite time direction is the same. 

  Assume now that, in addition to the above, $V\geq 0$. Then, by an argument similar to, but simpler than, the proof of
  Lemma~\ref{lemma:BianchiAdevelopment}, the existence interval of solutions can be assumed to be of the form $J=(t_-,\infty)$ where $t_->-\infty$. Moreover,
  $\theta(t)\rightarrow\infty$ as $t\downarrow t_-$, $\theta\notin L^{1}(t_-,t_-+1)$ and $\theta\notin L^{1}(t_-+1,\infty)$. By a translation in time, there is
  a unique $t_0>0$ such that the existence interval is $(0,\infty)$ and
  the original initial data are induced at $t_0$. This demonstrates existence. In order to prove uniqueness,
  assume that $a_i$, $\phi_i$ and $t_{0,i}$, $i=1,2$, satisfy the conditions of the lemma. Then, due to the arguments provided in connection with
  (\ref{seq:ESF hyperbolic setting}), we know that $a_i$ and $\phi_i$ both solve (\ref{seq:ESF hyperbolic setting}). Due to the requirement that
  the solutions should induce the correct initial data at $\bM_{t_{0,i}}$, we conclude that $a_{1}(t_{0,1})=a_2(t_{0,2})$ and
  $\phi_{1}(t_{0,1})=\phi_2(t_{0,2})$, and similarly for the time derivatives. Due to uniqueness of solutions of systems of ODE's, we conclude that
  the solution $(a_1,\phi_1)$ is simply a time translation of $(a_2,\phi_2)$. However, since both solutions have a crushing singularity at
  $t=0$, the time translation has to be trivial. The lemma follows. 
\end{proof}
Assume that all the conditions of Lemma~\ref{lemma:ex un k minus one} are fulfilled. In the special case of trivial initial data as in
Definition~\ref{def:id k minus one}, we conclude from (\ref{eq:sca field eq k minus one}) that $\phi\equiv\bphi_0$. Let
$\Lambda:=V(\bphi_0)$ and $H:=(\Lambda/3)^{1/2}$. Then (\ref{eq:att k minus one}) implies that
$a(t)=\a\sinh(Ht)$ for some constant $\a>0$ in case $\Lambda>0$ and that $a(t)=\a t$ for some constant $\a>0$ in case $\Lambda=0$. Combining this observation
with the requirement that (\ref{eq:hamcon k minus one}) hold yields the conclusion that $a(t)=a_\Lambda(t)$, where $a_\Lambda$ is defined by
(\ref{eq:a of t expl formula}). The statements of Remark~\ref{remark:Milne and Milne Lambda solns} follow. 

Given that all the assumptions of Lemma~\ref{lemma:ex un k minus one} are fulfilled, let $\tau$ be a time coordinate satisfying (\ref{eq:dtdtau}).
Then the range of $\tau$ equals $\ro$; cf. the last conclusions of Lemma~\ref{lemma:ex un k minus one}. Moreover, 
\begin{equation}\label{eq:theta prime etc}
  \theta'=-(1+q)\theta,\ \ \
  a(\tau)=a(0)e^{\tau},
\end{equation}
where $q:=\phi_{\tau}^{2}/3-3V\circ\phi/\theta^{2}$ and a prime denotes differentiation with respect to $\tau$. Next, we derive the asymptotics of solutions. 
\begin{prop}\label{prop:two outcomes as k minus one}
  Assume that all the conditions of Lemma~\ref{lemma:ex un k minus one} are fulfilled. Assume, in addition, that $V\in \mfP_{\a_V}^1$ for some
  $\alpha_V\in (0,1/3)$. Let $(a,\phi)$ be the solution obtained in Lemma~\ref{lemma:ex un k minus one}, defined on $J=(0,\infty)$. Then there are
  two possibilities. The first is that there is a $\phi_\infty\in\rn{}$ and a constant $C$ such that
  \begin{equation}\label{eq:phi asymptotics vacuum hyp t ver}
    |\phi_t(t)/\theta(t)|+|\phi(t)-\phi_\infty|\leq Ct^2
  \end{equation}
  holds for all $t\leq 1$. Then the metric asymptotes to a solution to the Einstein vacuum equations with a cosmological constant $\Lambda:=V(\phi_\infty)$
  in the direction of the singularity, cf. Remark~\ref{remark:Milne and Milne Lambda solns}, in the sense that there is a constant $C$ such that  
  \begin{subequations}\label{seq:a theta as k minus one dos}
    \begin{align}
      |a(t)-a_\Lambda(t)|\leq & Ct^5,\label{eq:a as vacuum setting}\\
      \left|\theta(t)-3\tfrac{\dot{a}_\Lambda(t)}{a_\Lambda(t)}\right|\leq & Ct^3\label{eq:theta as vacuum setting k minus one}
    \end{align}
  \end{subequations}
  for all $t\leq 1$. 
  
  The second possibility is that there is a $\Phi_{1}=\pm\sqrt{2/3}$, a $\Phi_0$, an $a_0>0$ and a constant $C$ such that
  \begin{equation}\label{eq:a phi as k minus one dos}
    |\theta^{p}a-a_0|+|\phi_t/\theta-\Phi_1|+|\phi+\Phi_1\ln\theta-\Phi_0|\leq C\theta^{-4/3}
  \end{equation}
  for all $t\leq 1$, where $p=1/3$. In this case, there is also a constant $C$ such that
  \begin{equation}\label{eq:theta as k minus one dos}
    |t\theta(t)-1|\leq Ct^{4/3}
  \end{equation}
  for all $t\leq 1$. 
\end{prop}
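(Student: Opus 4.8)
\textbf{Proof strategy for Proposition~\ref{prop:two outcomes as k minus one}.}
The plan is to work in the $\tau$-time coordinate of (\ref{eq:theta prime etc}), where the existence interval is all of $\ro$, and to analyse the quantity $\Psi:=\phi_\tau^2/6+3V\circ\phi/\theta^2$ together with the auxiliary decomposition $q=\phi_\tau^2/3-3V\circ\phi/\theta^2$. The Hamiltonian constraint (\ref{eq:hamcon k minus one}), rewritten in expansion normalised form, reads $\Psi+3/(a^2\theta^2)=1$, i.e. $\phi_\tau^2/6+3V\circ\phi/\theta^2+3a^{-2}\theta^{-2}=1$. Since $V\geq 0$, this gives $|\phi_\tau|\leq\sqrt{6}$, hence $|\phi(\tau)|\leq\sqrt{6}|\tau|+C$ and, using $V\in\mfP_{\a_V}^0$ with $\a_V<1/3$, one gets an a priori upper bound on $3V\circ\phi/\theta^2$ once one knows $\theta(\tau)$ does not decay too slowly to the past. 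The dichotomy will come from monitoring $\Psi$: from the evolution equations one computes $\Psi'=-2\Psi+\tfrac{2}{3}(\phi_\tau^2/2+ \text{lower order})+\dots$; more precisely I expect $\Psi'=-2(1-q/?)\Psi+\dots$ to split into the ``Milne-like'' branch, where $\phi_\tau/\theta\to 0$ and $a^{-2}\theta^{-2}$ does not converge to zero, and the ``data on the singularity'' branch, where $\phi_\tau^2/\theta^2\to 2/3$.

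The first step is to establish the coarse asymptotics of $\theta$ and the behaviour of $3a^{-2}\theta^{-2}$. Differentiating the constraint and using (\ref{eq:theta prime etc}) and (\ref{eq:sca field eq k minus one}), one obtains that $R:=3a^{-2}\theta^{-2}$ satisfies a clean equation; since $a(\tau)=a(0)e^{\tau}$ and $\theta'=-(1+q)\theta$, one gets $R'=(2q)R$ (up to checking signs), so $R$ is monotone along the flow, hence converges to a limit $R_\infty\in[0,1]$. If $R_\infty>0$ we are in the Milne-like branch: then $q\to 0$, so $\phi_\tau^2/3\to 3V\circ\phi/\theta^2$, and combined with $V\in\mfP^1_{\a_V}$ and the growth bound on $|\phi|$ one bootstraps to show $\phi_\tau\to 0$ exponentially in $\tau$, hence $\phi\to\phi_\infty$, $V\circ\phi\to V(\phi_\infty)=:\Lambda$, and $R_\infty=1-3V(\phi_\infty)/\theta_\infty^2$-type relations pin down the limiting dynamical system to be exactly the vacuum-plus-$\Lambda$ system. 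One then compares $(a,\theta)$ with $(a_\Lambda,\dot a_\Lambda/a_\Lambda)$ by a Gronwall/fixed-point argument, converting the exponential-in-$\tau$ decay into powers of $t$ via $t(\tau)\sim e^{3\tau}$ (cf. the argument around (\ref{eq:ttauas})) to get (\ref{eq:phi asymptotics vacuum hyp t ver}) and (\ref{seq:a theta as k minus one dos}); the exponents $t^2,t^5,t^3$ should fall out of the decay rate $e^{(\text{const})\tau}$ matched against $a(\tau)=a(0)e^\tau$ and the relation $t\sim e^{3\tau}$.

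If instead $R_\infty=0$, we are in the data-on-the-singularity branch: the constraint then forces $\Psi\to 1$, and since $V\circ\phi/\theta^2\to 0$ (this uses $R\to 0$ together with the $\mfP^0_{\a_V}$ bound and the now-exponential growth of $\theta$ to the past, by an argument parallel to the proof of Lemma~\ref{lemma: Bianchi I isotropic} and Remark~\ref{remark:V divided by theta sq to zero}), one gets $\phi_\tau^2/6\to 1$, i.e. $\phi_\tau/\theta\cdot\theta\to\pm\sqrt 6$, equivalently $\phi_t/\theta\to\pm\sqrt{2/3}=:\Phi_1$. Then $q\to 2$, so (\ref{eq:theta prime etc}) gives $\theta(\tau)=e^{-3\tau}[1+O(e^{\e\tau})]$ after fixing the translation ambiguity, and $t(\tau)=e^{3\tau}[1+O(e^{\e\tau})]$, yielding (\ref{eq:theta as k minus one dos}) with the $t^{4/3}$ rate once one checks the sharp decay exponent is $\e=\min\{2(1-3\a_V)\cdot\text{something},4\}$ and that the constraint feeds back $3V\circ\phi/\theta^2=O(e^{6(1-\a_V)\tau})$, hence $O(t^{2(1-\a_V)})$, with $2(1-\a_V)\geq 4/3$ precisely because $\a_V<1/3$. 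Integrating $\phi_\tau-(\pm\sqrt6)=O(e^{\e\tau})$ gives $\phi+\Phi_1\ln\theta\to\Phi_0$, and $a(\tau)=a(0)e^\tau=a(0)\theta^{-1/3}[1+O(e^{\e\tau})]$ gives $\theta^{1/3}a\to a_0$ with error $O(\theta^{-4/3})$, which is (\ref{eq:a phi as k minus one dos}). The main obstacle I anticipate is making the rates sharp in both branches simultaneously: one must carefully track how the $\mfP^1_{\a_V}$-bound on $V$ and $V'$, the exponential behaviour of $\theta$, and the polynomial change of variables $t\leftrightarrow\tau$ interact, and in particular verify that the hypothesis $\a_V<1/3$ is exactly what makes the ``matter'' error $t^{2(1-\a_V)}$ subleading to the ``geometry'' error $t^{4/3}$ in the second branch and guarantees the $t^2,t^3,t^5$ powers in the first; the rest is a routine Gronwall bootstrap plus an application of Proposition~\ref{prop:limitcharsp}-type reasoning (here replaced by the direct monotonicity of $R$) and the existence/uniqueness already furnished by Lemma~\ref{lemma:ex un k minus one}.
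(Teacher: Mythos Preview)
Your identification of the two branches and their characterisations via $R:=3a^{-2}\theta^{-2}$ is correct, but the mechanism you propose for establishing the dichotomy has a genuine gap. You compute $R'=2qR$, which is right, and then conclude that $R$ is monotone. It is not: $q=\phi_\tau^2/3-3V\circ\phi/\theta^2$ can take either sign (one only has $q\in(-1,2]$), so $R$ can oscillate and need not converge. Without the convergence of $R$, the subsequent step ``if $R_\infty>0$ then $q\to 0$'' is unjustified; even if $R$ did converge to a positive limit, $R'=2qR$ would only give $q\in L^1(-\infty,0)$, not $q\to 0$.

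The paper instead recycles the argument of Theorem~\ref{thm:dichotomy}: set $\Psi$ and $\Omega$ as in (\ref{seq:PsiPhiphiiqdef}), note that $\Psi'=-2(2-q)\Psi+4\Omega$ (this is the identity you were groping for), and consider
\[
F(\tau):=\exp\Bigl(-\textstyle\int_\tau^0 2[2-q(s)]\,ds\Bigr)\Psi(\tau),
\]
which satisfies $F'=4\Omega\exp(\dots)\geq 0$. Thus $F$ is the monotone quantity, and the dichotomy is $F_0:=\lim_{\tau\to-\infty}F(\tau)$ equal to zero or strictly positive. The case $F_0=0$ combined with $\theta(\tau)\geq c_\theta e^{-\tau}$ (from $1+q>0$) and $V\in\mfP^1_{\a_V}$, $\a_V<1/3$, gives (\ref{eq:Omegaphiprstrongdec}), hence $q$ decays exponentially and $e^{\tau}\theta\to\theta_\infty$; this is your Milne-like branch. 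The case $F_0>0$ forces $2-q\in L^1(-\infty,0)$, hence $e^{3\tau}\theta\to\theta_\infty$ and $\Omega\leq Ce^{6(1-\a_V)\tau}$; the constraint then gives $\bigl||\phi_\tau|-\sqrt{6}\bigr|\leq Ce^{4\tau}$ since $6(1-\a_V)>4$.

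There is also a secondary error in your first-branch discussion: you write ``via $t(\tau)\sim e^{3\tau}$'', but in the Milne-like branch $q\to 0$ gives $\theta\sim e^{-\tau}$, hence $dt/d\tau=3/\theta\sim Ce^{\tau}$ and $t\sim e^{\tau}$. The relation $t\sim e^{3\tau}$ holds only in the second branch. This matters for extracting the correct powers $t^2,t^3,t^5$ in (\ref{eq:phi asymptotics vacuum hyp t ver})--(\ref{seq:a theta as k minus one dos}): e.g.\ $|\phi_\tau|\leq Ce^{2\tau}$ translates directly to $|\phi_t/\theta|\leq Ct^2$ via $t\sim e^\tau$. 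Once the dichotomy is set up correctly and the $t$--$\tau$ relation is used with the right exponent in each branch, the rest of your outline (bootstrap on the constraint, integration of the exponential estimates) is essentially what the paper does.
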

\begin{remark}
  The estimates (\ref{seq:a theta as k minus one dos}) can also be written
  \[
    \left|a(t)-t-\tfrac{\Lambda}{18}t^3\right|\leq Ct^4,\ \ \
    \left|\theta(t)-\tfrac{3}{t}-\tfrac{\Lambda}{3}t\right|\leq Ct^2
  \]
  for all $t\leq 1$.
\end{remark}
\begin{remark}
  When (\ref{eq:a phi as k minus one dos}) holds, the development induces data on the singularity in the sense of
  Definition~\ref{def:ind data on sing k negative}. Moreover, due to (\ref{eq:theta as k minus one dos}), the $\theta^{-4/3}$ appearing on
  the right hand side of (\ref{eq:a phi as k minus one dos}) can be replaced by $t^{4/3}$.
\end{remark}
\begin{remark}
  In the case of the Einstein scalar field equations, the only way for $\phi_t/\theta$ to converge to zero to the past is if $\phi_t$ is
  identically zero; due to (\ref{seq:ESF hyperbolic setting}), $\d_t(\phi_t/\theta)=-6\theta^{-1}a^{-2}\cdot(\phi_t/\theta)$. This means that
  the scalar field is constant and the relevant solution is the Milne model. 
\end{remark}
\begin{proof}
  Introduce a time coordinate $\tau$ according to (\ref{eq:dtdtau}) and note that (\ref{seq:ESF hyperbolic setting}) and (\ref{eq:theta prime etc}) yield
  $\theta(\tau)\geq c_\theta e^{-\tau}$ for some constant $c_\theta>0$.
  Define $\Psi$ and $\Omega$ as in (\ref{seq:PsiPhiphiiqdef}). Then $\Psi$ satisfies (\ref{eq:Psieq}). At this stage, we can revisit the arguments
  presented in the proof of Theorem~\ref{thm:dichotomy}. This means that we have two possibilities. The first one is that
  (\ref{eq:Omegaphiprstrongdec}) holds. Thus $q$ decays exponentially, so that there is a $\theta_\infty>0$ such that
  $e^{\tau}\theta(\tau)\rightarrow \theta_\infty$ as $\tau\rightarrow-\infty$. This means that $|\phi_\tau(\tau)|\leq Ce^{2\tau}$, so that there is a
  $\phi_\infty$ such that
  \begin{equation}\label{eq:phi asymptotics vacuum hyp}
    |\phi_\tau(\tau)|+|\phi(\tau)-\phi_\infty|\leq Ce^{2\tau}
  \end{equation}
  holds for all $\tau\leq 0$. Combining this information with (\ref{seq:ESF hyperbolic setting}) and (\ref{eq:theta prime etc}) yields the
  conclusion that
  \begin{equation}\label{eq:theta as vac k minus one}
    \left|\theta^{2}(\tau)-9/a^{2}(\tau)-3V(\phi_\infty)\right|\leq Ce^{2\tau}
  \end{equation}
  for all $\tau\leq 0$. This yields
  \begin{equation}\label{eq:asymptotic metric vacuum hyp}
    g=-9[\theta(\tau)]^{-2}d\tau\otimes d\tau+a^{2}(\tau)\bge_{-}=-3\alpha e^{2\tau}[1+\a\Lambda e^{2\tau}+\psi(\tau)]^{-1}d\tau\otimes d\tau+3\alpha e^{2\tau}\bge_{-},
  \end{equation}
  where $\alpha=a^{2}(0)/3$ and $\psi\in C^{\infty}(\rn{})$ is such that $|\psi(\tau)|\leq Ce^{4\tau}$ for some constant $C$ and all $\tau\leq 0$.
  Considering proper time as a function of $\tau$, it follows that
  \[
    \tfrac{dt}{d\tau}=\sqrt{3\alpha} e^{\tau}[1+\a\Lambda e^{2\tau}+\psi(\tau)]^{-1/2}
    =\sqrt{3\alpha} e^{\tau}-\tfrac{1}{2}\sqrt{3}\a^{3/2}\Lambda e^{3\tau}+O(e^{5\tau}).
  \]
  Since $t(\tau)\rightarrow 0$ as $\tau\rightarrow-\infty$, this means that
  \begin{equation}\label{eq:t of tau vacuum k minus one}
    t(\tau)=\sqrt{3\alpha} e^{\tau}-\tfrac{1}{6}\sqrt{3}\a^{3/2}\Lambda e^{3\tau}+O(e^{5\tau}).
  \end{equation}
  This relation implies that
  \[
    a(t)=\sqrt{3\a}e^{\tau(t)}=t+\tfrac{\Lambda}{18}t^3+O(t^5)
  \]
  so that (\ref{eq:a as vacuum setting}) holds. Combining this with (\ref{eq:theta as vac k minus one}) yields 
  \[
    \left|\theta(t)-\tfrac{3}{t}-\tfrac{\Lambda}{3}t\right|\leq Ct^3
  \]
  for all $t\leq 1$. Thus (\ref{eq:theta as vacuum setting k minus one}) holds. Finally, combining (\ref{eq:phi asymptotics vacuum hyp}) and
  (\ref{eq:t of tau vacuum k minus one}) yields (\ref{eq:phi asymptotics vacuum hyp t ver}).

  The second possibility is that $F_0$, introduced in the proof of Theorem~\ref{thm:dichotomy}, is non-zero. This means that $2-q$ is integrable to
  the past; see (\ref{eq:qmtint}). There is thus a $\theta_\infty>0$ such that $e^{3\tau}\theta(\tau)\rightarrow\theta_\infty$ as
  $\tau\rightarrow-\infty$, so that
  \begin{equation}\label{eq:Omega as matter hyp}
    \Omega(\tau)\leq Ce^{6(1-\a_V)\tau}
  \end{equation}
  for all $\tau\leq 0$. Combining this with (\ref{seq:ESF hyperbolic setting}) yields, recalling $6(1-\a_V)>4$ (since $\a_V<1/3$),
  \begin{equation}\label{eq:phitau asymp matter hyp}
    \big||\phi_\tau(\tau)|-\sqrt{6}\big|\leq Ce^{4\tau}
  \end{equation}
  for all $\tau\leq 0$, so that
  \begin{equation}\label{eq:phi phitau as hyp}
    |\phi_\tau(\tau)-\Phi_1|+|\phi(\tau)-\Phi_1\tau-\Phi_0|\leq Ce^{4\tau}
  \end{equation}
  holds for all $\tau\leq 0$. Due to the definition of $q$, (\ref{eq:Omega as matter hyp}) and
  (\ref{eq:phitau asymp matter hyp}) imply that $|q(\tau)-2|$ is bounded by the right hand side of (\ref{eq:phitau asymp matter hyp}).
  Combining this estimate with $\theta'=-(1+q)\theta$ yields a $\theta_\infty>0$ and a constant $C$ such that 
  \begin{equation}\label{eq:theta q as hyp}
    |\ln\theta(\tau)+3\tau-\ln\theta_\infty|+|q(\tau)-2|\leq Ce^{4\tau}
  \end{equation}
  holds for all $\tau\leq 0$, where $q$ is defined in connection with (\ref{eq:theta prime etc}). Since $a(\tau)=a(0)e^{\tau}$, the estimates
  (\ref{eq:phi phitau as hyp}) and (\ref{eq:theta q as hyp}) yield (\ref{eq:a phi as k minus one dos}). Due to (\ref{eq:dtdtau}) and
  (\ref{eq:theta q as hyp}), one can finally verify that (\ref{eq:theta as k minus one dos}) holds. 
\end{proof}

\subsection{Specifying the asymptotics}\label{ssection:spec as Milne k minus one}  

In Subsection~\ref{subsection:generic behaviour k minus one} below, we prove that the first outcome in Proposition~\ref{prop:two outcomes as k minus one}
is non-generic, and in Remark~\ref{remark:Milne and Milne Lambda solns}, we prove that it is possible. Our next goal is to prove that we can parametrise
solutions with the property that $\phi(t)\rightarrow\phi_\infty$ as $t\downarrow 0$ by the limit. We begin by reformulating the relevant equations. 

\textbf{Equations.} Assume that all the conditions of Lemma~\ref{lemma:ex un k minus one} are satisfied. Then the equations for $\phi$ and $\theta$,
i.e. (\ref{eq:sca field eq k minus one}) and the first equation in (\ref{eq:theta prime etc}), can be written
\begin{subequations}\label{seq:phi vartheta version hyp}
  \begin{align}
    \phi_{\tau\tau} = & (q-2)\phi_\tau-9\vartheta^2 V'\circ\phi,\label{eq:phi vartheta ver}\\
    \vartheta' = & (1+q)\vartheta,\label{eq:vartheta vartheta ver}
  \end{align}
\end{subequations} 
where $\vartheta=1/\theta$ and $q:=\phi_{\tau}^{2}/3-3\vartheta^2V\circ\phi$. Here the initial data have to be such that
\begin{equation}\label{eq:vartheta ini cond}
  \tfrac{1}{2}\phi_\tau^2+9\vartheta^2 V\circ\phi<3;
\end{equation}
this condition is required to ensure that there is an $a>0$ such that (\ref{eq:hamcon k minus one}) holds. Given $\phi_\infty\in\ro$,
(\ref{seq:phi vartheta version hyp}) can be written
\begin{subequations}\label{seq:phi vartheta version hyp dos}
  \begin{align}
    \varphi_0' = & e^{-2\tau}\varphi_1,\label{eq:varphiz prime}\\
    \varphi_1' = & q\varphi_1-9e^{2\tau}\vartheta^2V'\circ\phi_0,\label{eq:phi vartheta ver dos}\\
    r' = & q,\label{eq:vartheta vartheta ver dos}
  \end{align}
\end{subequations}
where
\[
  \phi_0:=\phi,\ \ \
  \varphi_0:=\phi_0-\phi_\infty,\ \ \
  \varphi_1:=e^{2\tau}\phi_\tau,\ \ \
  r:=\ln\vartheta-\tau.
\]
In particular, in (\ref{seq:phi vartheta version hyp dos}), it is understood that 
\begin{equation}\label{eq:phiz varth q def}
  \phi_0=\varphi_0+\phi_\infty,\ \ \
  \vartheta=e^{r+\tau},\ \ \
  q=e^{-4\tau}\varphi_1^2/3-3\vartheta^2V\circ\phi_0.
\end{equation}
Next, we prove Proposition~\ref{prop:Milne lambda as k minus one}. 
\begin{proof}[Proof of Proposition~\ref{prop:Milne lambda as k minus one}]
  We begin by proving existence.

  \textbf{Existence.} Fix $\tau_0\leq 0$ and let $\mC_{\tau_0}$ denote the set of $\varphi_{0},\varphi_{1},r\in C^0(-\infty,\tau_0]$, $i=1,2$, such that 
  \begin{equation}\label{eq:varphi etc bds mC tauz}
    |\varphi_{0}(\tau)|\leq e^{\tau/2},\ \ \
    |\varphi_{1}(\tau)|\leq e^{3\tau},\ \ \
    |r(\tau)|\leq e^{\tau}
  \end{equation}
  for all $\tau\leq \tau_0$. Given $x_i=(\varphi_{0,i},\varphi_{1,i},r_i)\in \mC_{\tau_0}$, $i=1,2$, define
  \begin{equation*}
    \begin{split}
      d_{\tau_0}(x_1,x_2) := & \textstyle{\sup}_{\tau\leq \tau_0}[e^{-\tau/2}|\varphi_{0,1}(\tau)-\varphi_{0,2}(\tau)|]
      +\sup_{\tau\leq \tau_0}[e^{-3\tau}|\varphi_{1,1}(\tau)-\varphi_{1,2}(\tau)|]\\
      & +\textstyle{\sup}_{\tau\leq \tau_0}[e^{-\tau}|r_{1}(\tau)-r_{2}(\tau)|].
    \end{split}
  \end{equation*}
  Note that $(\mC_{\tau_0},d_{\tau_0})$ is a complete metric space.  Given $x_a=(\varphi_{0,a},\varphi_{1,a},r_a)\in \mC_{\tau_0}$, define
  $\varphi_{0,b}$, $\varphi_{1,b}$, $r_b\in C^0(-\infty,\tau_0]$ by
  \begin{subequations}\label{seq:varphi b def}
    \begin{align}
      \varphi_{0,b}(\tau) = & \textstyle{\int}_{-\infty}^{\tau}e^{-2s}\varphi_{1,a}(s)ds,\label{eq:varphizbdef}\\
      \varphi_{1,b}(\tau) = & \textstyle{\int}_{-\infty}^{\tau}[q_a(s)\varphi_{1,a}(s)-9e^{2s}\vartheta_a^2(s)V'\circ\phi_{0,a}(s)]ds,\label{eq:varphioneb def}\\
      r_b(\tau) = & \textstyle{\int}_{-\infty}^{\tau}q_a(s)ds.\label{eq:rb def}
    \end{align}
  \end{subequations}
  Here $\phi_{0,a}$, $\vartheta_a$ and $q_a$ are defined by (\ref{eq:phiz varth q def}) in which $\varphi_0$, $r$ and $\varphi_1$ are replaced by
  $\varphi_{0,a}$, $r_a$ and $\varphi_{1,a}$ respectively. In what follows, we denote $(\varphi_{0,b},\varphi_{1,b},r_b)$ by $x_b$, and we denote the
  map taking $x_a\in \mC_{\tau_0}$ to $x_b$ by $F$; i.e., $x_b=F(x_a)$. Since $|\varphi_{1,a}(\tau)|\leq e^{3\tau}$, (\ref{eq:varphizbdef}) implies that
  \begin{equation}\label{eq:varphizb}
    |\varphi_{0,b}(\tau)|\leq e^{\tau}
  \end{equation}
  for all $\tau\leq\tau_0$. Let
  \[
    M_V:=\textstyle{\sup}_{s\in [-1,1]}|V(\phi_\infty+s)|+\sup_{s\in [-1,1]}|V'(\phi_\infty+s)|+\sup_{s\in [-1,1]}|V''(\phi_\infty+s)|.
  \]
  We can then estimate that the integrand in (\ref{eq:varphioneb def}) is bounded in absolute value by $4C_\varphi e^{4s}$, where $C_\varphi$ only depends
  on $M_V$. Similarly, $|q_a(\tau)|\leq 2C_qe^{2\tau}$, where $C_q$ only depends on $M_V$. This means that
  \begin{equation}\label{eq:varphioneb rbb}
    |\varphi_{1,b}(\tau)|\leq C_\varphi e^{4\tau},\ \ \
    |r_b(\tau)|\leq C_qe^{2\tau}
  \end{equation}
  for all $\tau\leq\tau_0$. From (\ref{eq:varphizb}) and (\ref{eq:varphioneb rbb}), it follows that there is a $T_0\leq 0$, depending only on $M_V$, such
  that if $\tau_0\leq T_0$, then $x_b\in \mC_{\tau_0}$. For $\tau_0\leq T_0$, we thus conclude that $F$ is a map from $\mC_{\tau_0}$ to itself. We wish to prove
  that this map is a contraction for a suitable choice of $\tau_0$.
  
  Let $x_a,x_c\in \mC_{\tau_0}$, $x_b:=F(x_a)$ and $x_d:=F(x_c)$. Then, using notation analogous to the above, 
  \begin{equation}\label{eq:varphi zero bd diff}
    |\varphi_{0,b}(\tau)-\varphi_{0,d}(\tau)|\leq \textstyle{\int}_{-\infty}^\tau e^{-2s}|\varphi_{1,a}(s)-\varphi_{1,c}(s)|ds\leq e^{\tau}d_{\tau_0}(x_a,x_c)
  \end{equation}
  for all $\tau\leq\tau_0$. Before proceeding to the other components, estimate, for $k\in\{0,1\}$,
  \begin{equation*}
    \begin{split}
      |V^{(k)}[\phi_{0,a}(\tau)]-V^{(k)}[\phi_{0,c}(\tau)]| = & \left|\textstyle{\int}_{0}^{1}V^{(k+1)}[s\phi_{0,a}(\tau)+(1-s)\phi_{0,c}(\tau)]ds\cdot
        [\varphi_{0,a}(\tau)-\varphi_{0,c}(\tau)]\right|\\
      \leq & M_V|\varphi_{0,a}(\tau)-\varphi_{0,c}(\tau)|\leq M_Ve^{\tau/2}d_{\tau_0}(x_a,x_c)
    \end{split}
  \end{equation*}
  for $\tau\leq\tau_0$. We also estimate
  \begin{equation*}
    \begin{split}
      |\vartheta_a^2(\tau)-\vartheta_c^2(\tau)| \leq & [\vartheta_a(\tau)+\vartheta_c(\tau)]\vartheta_c(\tau)|e^{r_a(\tau)-r_c(\tau)}-1|\\
      \leq & 2e^4 e^{2\tau}|r_a(\tau)-r_c(\tau)|\leq 2e^4 e^{3\tau}d_{\tau_0}(x_a,x_c)
    \end{split}
  \end{equation*}
  for $\tau\leq\tau_0$. Due to estimates of this type, it follows that
  \begin{equation}\label{eq:qa minus qc est}
    |q_a(\tau)-q_c(\tau)|\leq 2K_qe^{2\tau}d_{\tau_0}(x_a,x_c)
  \end{equation}
  for all $\tau\leq\tau_0$, where $K_q$ only depends on $M_V$. Similarly
  \begin{equation}\label{eq:varphi one diff est}
    \begin{split}
      & \left|[q_a(\tau)\varphi_{1,a}(\tau)-9e^{2\tau}\vartheta_a^2(\tau)V'\circ\phi_{0,a}(\tau)]
        -[q_c(\tau)\varphi_{1,c}(\tau)-9e^{2\tau}\vartheta_c^2(\tau)V'\circ\phi_{0,c}(\tau)]\right|\\
      \leq & 4K_\varphi e^{4\tau}d_{\tau_0}(x_a,x_c)
    \end{split}
  \end{equation}
  for all $\tau\leq\tau_0$, where $K_\varphi$ only depends on $M_V$. Combining (\ref{eq:rb def}) with (\ref{eq:qa minus qc est}) yields
  \begin{equation}\label{eq:r bd diff}
    |r_b(\tau)-r_d(\tau)|\leq \textstyle{\int}_{-\infty}^{\tau}|q_a(s)-q_c(s)|ds\leq K_qe^{2\tau}d_{\tau_0}(x_a,x_c)
  \end{equation}
  for all $\tau\leq\tau_0$. Similarly, combining (\ref{eq:varphioneb def}) and (\ref{eq:varphi one diff est}) yields
  \begin{equation}\label{eq:varphi one bd diff}
    |\varphi_{1,b}(\tau)-\varphi_{1,d}(\tau)|\leq K_\varphi e^{4\tau}d_{\tau_0}(x_a,x_c)
  \end{equation}
  for all $\tau\leq\tau_0$. Combining (\ref{eq:varphi zero bd diff}), (\ref{eq:r bd diff}) and (\ref{eq:varphi one bd diff}) yields
  \[
    d_{\tau_0}[F(x_a),F(x_c)]=d_{\tau_0}(x_b,x_d)\leq Ke^{\tau_0/2}d_{\tau_0}(x_a,x_c).
  \]
  There is thus a $T_0\leq 0$ such that for $\tau_0\leq T_0$, $F$ maps $\mC_{\tau_0}$ to $\mC_{\tau_0}$ and, for $x_a,x_c\in\mC_{\tau_0}$,
  \[
    d_{\tau_0}[F(x_a),F(x_c)]\leq\tfrac{1}{2}d_{\tau_0}(x_a,x_c).
  \]
  By the Banach fixed point theorem, we conclude that $F$ has a unique fixed point in $\mC_{\tau_0}$. This means that there are unique continuous
  functions $\varphi_0$, $\varphi_1$ and $r$ for $\tau\leq\tau_0$, satisfying the bounds (\ref{eq:varphi etc bds mC tauz}) and the equations
  \begin{subequations}\label{seq:varphi eqs}
    \begin{align}
      \varphi_{0}(\tau) = & \textstyle{\int}_{-\infty}^{\tau}e^{-2s}\varphi_{1}(s)ds,\label{eq:varphiz fin}\\
      \varphi_{1}(\tau) = & \textstyle{\int}_{-\infty}^{\tau}[q(s)\varphi_{1}(s)-9e^{2s}\vartheta^2(s)V'\circ\phi_{0}(s)]ds,\label{eq:varphione fin}\\
      r(\tau) = & \textstyle{\int}_{-\infty}^{\tau}q(s)ds.\label{eq:r fin}
    \end{align}
  \end{subequations}
  From these equalities, it follows that the components of the fixed point are smooth for $\tau\leq\tau_0$. Define $\theta(\tau)=e^{-r(\tau)-\tau}$ and
  $\phi(\tau)=\varphi_0(\tau)+\phi_\infty$. Then, due to (\ref{eq:varphi etc bds mC tauz}) and (\ref{eq:varphiz fin}),
  \[
    |\phi(\tau)-\phi_\infty|\leq e^{\tau/2},\ \ \
    |e^\tau\theta(\tau)-1|\leq Ce^\tau
  \]
  for $\tau\leq\tau_0$. Due to (\ref{seq:varphi eqs}), it can also be verified that
  \begin{equation}\label{eq:phi tau eq theta tau eq}
    \phi_{\tau\tau}=(q-2)\phi_\tau-9\tfrac{V'\circ\phi}{\theta^2},\ \ \
    \theta_\tau=-(1+q)\theta.
  \end{equation}
  Moreover,
  \begin{equation}\label{eq:q in terms of phi tau etc}
    q=\tfrac{1}{3}\phi_\tau^2-\tfrac{3V\circ\phi}{\theta^2}.
  \end{equation}
  Next, introduce a time coordinate $t$ by
  \begin{equation}\label{eq:t ito tau def}
    t(\tau)=\textstyle{\int}_{-\infty}^{\tau}\tfrac{3}{\theta(s)}ds.
  \end{equation}
  Combining (\ref{eq:phi tau eq theta tau eq}), (\ref{eq:q in terms of phi tau etc}) and (\ref{eq:t ito tau def}) yields
  \begin{equation}\label{eq:theta t phi tt k negative}
    \theta_t=-\tfrac{1}{3}\theta^2-\phi_t^2+V\circ\phi,\ \ \
    \phi_{tt}=-\theta\phi_t-V'\circ\phi.
  \end{equation}
  Next, define $a$ by the condition that
  \begin{equation}\label{eq: at k negative}
    a_t=\tfrac{1}{3}\theta a,
  \end{equation}
  the condition that $a$ be strictly positive and the condition that
  \begin{equation}\label{eq:ham con k negative}
    \tfrac{2}{3}\theta^{2}=\phi_{t}^{2}+\tfrac{6}{a^{2}}+2V\circ\phi
  \end{equation}
  hold at some $t=t_0$. Given this definition of $a$, define $H$ by (\ref{eq:Hdef k minus one}). It can then be verified that $H_t=-2\theta H/3$.
  Since $H(t_0)=0$, we conclude that $H$ vanishes identically, so that (\ref{eq:ham con k negative}) holds for all $t$. Finally, combining
  (\ref{eq:theta t phi tt k negative}), (\ref{eq: at k negative}) and (\ref{eq:ham con k negative}), it follows that the metric $g$ defined by
  (\ref{eq:hyperbolic version of metric}) and the scalar field $\phi$ satisfy the Einstein scalar field equations; see (\ref{seq:ESF hyperbolic setting}).

  \textbf{Uniqueness.} In order to prove uniqueness, assume $(a,\phi)$ to satisfy the conditions of the proposition. In particular,
  $\phi(t)\rightarrow\phi_\infty$ as $t\downarrow 0$. Let $S$ be the image of $(0,1)$ under $\phi$. Then $S$ is bounded and its closure is compact.
  We can modify $V$, if necessary, outside an open neighbourhood containing the closure of $S$ so that $V\in\mfP_{\a_V}^1$ for some $\a_V\in (0,1/3)$.
  As far as the asymptotics in the direction of the singularity are concerned, we can therefore, without loss of generality, assume that the conditions
  of Proposition~\ref{prop:two outcomes as k minus one} are satisfied. This means that the conclusions of this proposition concerning the first
  possibility in Proposition~\ref{prop:two outcomes as k minus one} (and the corresponding conclusions in the proof) hold. In particular, we can
  assume (\ref{eq:phi asymptotics vacuum hyp}) to hold. This means that $|\varphi_0(\tau)|\leq Ce^{2\tau}$ and $|\varphi_1(\tau)|\leq Ce^{4\tau}$
  for all $\tau\leq\tau_0$ and some suitably chosen $\tau_0$. By replacing the time coordinate $\tau$ by $\tau+\tau_1$ for some $\tau_1\in\ro$,
  if necessary, the estimate (\ref{eq:theta as vac k minus one}) implies that $e^{\tau}\theta(\tau)=1+O(e^{2\tau})$. As a consequence,
  $|r(\tau)|\leq Ce^{2\tau}$. In particular, for $\tau_0$ close enough to $-\infty$, $x:=(\varphi_0,\varphi_1,r)$ belongs to $\mC_{\tau_0}$. Since
  $\varphi_0$, $\varphi_1$ and $r$ satisfy (\ref{seq:phi vartheta version hyp dos}), it is clear that $x$ is a fixed point of the map $F$.
  By uniqueness of fixed points to $F$, we conclude that $x$ is the unique fixed point we already constructed. Since the original solution can
  be uniquely recovered from $x$ by the argument described at the end of the proof of existence, we conclude that the $(a,\phi)$ constructed
  in the existence part are unique.

  \textbf{Asymptotics.} The statements concerning the asymptotics follow from Proposition~\ref{prop:two outcomes as k minus one}. 
\end{proof}

\subsection{Initial data on the singularity}\label{ssection:idos k minus one}

Next, we prove Proposition~\ref{prop:dos ind dev k minus one}.

\begin{proof}[Proof of Proposition~\ref{prop:dos ind dev k minus one}]
  Let $(\bM,\msH,\msK,\bPhi_{1},\bPhi_{0})=\mfI_\infty$. By definition, $(\bM,\msH)$ is a complete hyperbolic $3$-manifold. In particular, there is a
  unique hyperbolic metric, say $\bge_-$, with scalar curvature $-6$ on $\bM$ such that $\msH=a_\infty^2\bge_-$ for some $a_\infty\in (0,\infty)$.
  In what follows, it is convenient to use the notation $\nu_0:=\ln a_\infty$. We also use the notation $\Phi_1:=3\bPhi_1$ and $\Phi_0:=\bPhi_0$. 

  \textbf{Existence.} We begin by proving existence of the desired development. Moreover, we do so by appealing to the abstract framework developed
  in Section~\ref{section:abstract setting dos}. However, this necessitates reformulating the equations. Note, to this end, that the Hamiltonian
  constraint (\ref{eq:hamcon k minus one}) can be reformulated to
  \[
    1=\tfrac{1}{6}\phi_\tau^2+A^2+\tfrac{3V\circ\phi}{\theta^2},
  \]
  where $A:=3/(\theta a)$. This means that we can rewrite
  \begin{equation}\label{eq:q diff ver k minus one}
    q=\tfrac{1}{3}\phi_\tau^2-\tfrac{3V\circ\phi}{\theta^2}=2-2A^2-\tfrac{9V\circ\phi}{\theta^2}.
  \end{equation}
  \textit{Variables.} The basic variables in the existence proof are $\nu$, $\varkappa$, $\psi_0$ and $\psi_1$. Given these variables and constants
  $\nu_0\in\ro$, $\Phi_0\in\ro$ and $\Phi_1\in\{-\sqrt{6},\sqrt{6}\}$ as above, we \textit{define} the following quantities:
  \begin{subequations}\label{seq:bth etc def}
    \begin{align}
      \bth := & e^{-3\tau+\varkappa},\label{eq:bth def}\\
      \bA := & 3e^{2\tau-\nu_0-\nu},\\
      \phi_1 := & \psi_1+\Phi_1,\\
      \phi_0 := & (\Phi_1+\psi_1)\tau+\Phi_0+\psi_0,\label{eq:phi zero def k minus one}\\
      \bq := & 2-2\bA^2-\tfrac{9V\circ\phi_0}{\bth^2}.\label{eq:bq def k minus one}
    \end{align}
  \end{subequations}
  Given these definitions, consider the following system of equations
  \begin{subequations}\label{seq:varkappa psiz psio prime etc k minus one}
    \begin{align}
      \nu' = & 2-\bq,\\
      \varkappa' = & 2-\bq,\\
      \psi_0' = & [(2-\bq)\phi_1+9\bth^{-2}V'\circ\phi_0]\tau,\\
      \psi_1' = & -(2-\bq)\phi_1-9\bth^{-2}V'\circ\phi_0.
    \end{align}
  \end{subequations}
  Our goal is to prove that there is a unique solution to (\ref{seq:varkappa psiz psio prime etc k minus one}) such that $x:=(\nu,\varkappa,\psi_0,\psi_1)$
  converges to zero exponentially. Given $\Pi:=(\nu_0,\Phi_0,\Phi_1)$, let, to this end, $\mX[x(\tau),\tau;\Pi]$ be defined by the right hand side of
  (\ref{seq:varkappa psiz psio prime etc k minus one}). Note also that, comparing with the framework developed Section~\ref{section:abstract setting dos},
  $k=4$ in our case. For a given $\e>0$ and a $\tau_0\leq 0$ we define $X_{\tau_0}$ and $d_{\tau_0}$ as at the beginning of Section~\ref{section:abstract setting dos}.
  We wish to prove that there is a choice of $\e>0$ such that (\ref{eq:mX bd and lip bd}) is satisfied.

  \textit{Existence, abstract setting.}  
  Let $x_a,x_b\in X_{\tau_0}$, $(\nu_a,\varkappa_a,\psi_{0,a},\psi_{1,a})=x_a$ and similarly for $x_b$. We also assume
  that $\bq_a$, $\bA_a$, $\phi_{1,a}$ etc. are obtained by inserting $\nu_a$ etc. in (\ref{seq:bth etc def}) and similarly for $\bq_b$ etc. In order to
  estimate $\mX[x_a(\tau),\tau;\Pi]$, note that 
  \begin{equation}\label{eq:qa minus two etc est}
    |\bq_a-2|+\tfrac{9|V'\circ\phi_{0,a}|}{\bth_a^2}
    \leq 2\bA_a^2+\tfrac{9|V\circ\phi_{0,a}|}{\bth_a^2}+\tfrac{9|V'\circ\phi_{0,a}|}{\bth_a^2}\leq C_Ae^{4\tau}+C_\Omega e^{6(1-\a_V)\tau}
  \end{equation}
  for all $\tau\leq\tau_0$, where $C_A$ only depends on $\nu_0$ and $C_\Omega$ only depends on $c_1$, $\e$ and $\Phi_0$; here $c_1$ is the
  constant appearing in (\ref{eq:V k-derivatives exp estimate}). From (\ref{eq:qa minus two etc est}), we deduce that
  \[
    |\mX[x_a(\tau),\tau;\Pi]|\leq C_{\mX,0}\ldr{\tau}e^{\b_V\tau}
  \]
  for all $\tau\leq \tau_0$, where $\b_V:=\min\{4,6(1-\a_V)\}$ and $C_{\mX,0}$ only depends on $\Pi$, $\e$ and $c_1$. This means that the first estimate in
  (\ref{eq:mX bd and lip bd}) is satisfied for any $\e<2\b_V/3$. Next, note that
  \[
    |\bA_a^2-\bA_b^2|\leq \bA_b(\bA_a+\bA_b)\left|\tfrac{\bA_a}{\bA_b}-1\right|\leq Ce^{4\tau}|e^{\nu_b-\nu_a}-1|\leq Ce^{4\tau}|\nu_b-\nu_a|,
  \]
  for $\tau\leq \tau_0$, where $C$ changes from line to line and only depends on $\nu_0$. Similarly,
  \[
    |\bth_a^{-2}-\bth_b^{-2}|\leq Ce^{6\tau}|\varkappa_a-\varkappa_b|
  \]
  for all $\tau\leq \tau_0$, where $C$ is a numerical constant. Finally, by equalities such as (\ref{eq:V xi minus V zeta}), if $l\in \{0,1\}$, then
  \[
    |V^{(l)}\circ\phi_{0,a}-V^{(l)}\circ\phi_{0,b}|\leq Ce^{-6\a_V\tau}[|\psi_{0,a}-\psi_{0,b}|+\ldr{\tau}|\psi_{1,a}-\psi_{1,b}|]
  \]
  for all $\tau\leq \tau_0$, where $C$ only depends on $c_2$, $\e$ and $\Phi_0$. Combining these observations with estimates such as
  (\ref{eq:theta a rt minus two Va diff}) yields the conclusion that
  \[
    |\mX[x_a(\tau),\tau;\Pi]-\mX[x_b(\tau),\tau;\Pi]|\leq C_{\mX,0}\ldr{\tau}^2e^{\b_V\tau}|x_a-x_b|
  \]
  for all $\tau\leq \tau_0$, where $C$ only depends on $c_2$, $\e$ and $\Pi$. This means the second estimate in (\ref{eq:mX bd and lip bd}) is satisfied for
  any $\e<\b_V$. To conclude, letting, e.g., $\e:=\b_V/2$, Lemma~\ref{lemma:abs exist and unique} applies and demonstrates that there is a unique solution to 
  (\ref{seq:varkappa psiz psio prime etc k minus one}) such that $e^{-\e\tau}|x(\tau)|\rightarrow 0$ as $\tau\rightarrow-\infty$.

  \textit{Existence, geometric setting.}  
  Given the unique solution to (\ref{seq:varkappa psiz psio prime etc k minus one}) constructed above, define $\bth$ by (\ref{eq:bth def}) and a new time
  coordinate $t$ by 
  \begin{equation}\label{eq:t of tau as k minus one dos}
    t(\tau):=\textstyle{\int}_{-\infty}^\tau3e^{3s-\varkappa(s)}ds=e^{3\tau}+O(e^{(3+\e)\tau}).
  \end{equation}
  Next, define $a$ by $a:=e^{\tau+\nu_0}e^{-\varkappa+\nu}$. It can then be verified that $\bth=3a_t/a$, so that we can interpret $\bth$ as the mean curvature
  of a metric of the form (\ref{eq:gSH k minus one}). For this reason, we, from now on, write $\theta=\bth$. These definitions yield
  \[
    \tfrac{3}{\theta a}=\tfrac{3}{\bth a}=3e^{3\tau-\varkappa}e^{-\tau-\nu_0}e^{\varkappa-\nu}=3e^{2\tau-\nu-\nu_0}=\bA.
  \]
  For this reason, we, from now on, write $\bA=A=3/(\theta a)$. Defining $\phi:=\phi_0$, see (\ref{eq:phi zero def k minus one}), we conclude that
  $\bq$ introduced in (\ref{eq:bq def k minus one}) coincides with the $q$ defined by the far right hand side of (\ref{eq:q diff ver k minus one}).
  For this reason, we, from now on, equate $\bq$ and $q$. Next, compute
  \begin{equation}\label{eq:dot theta k minus one}
    \theta_t=\theta'\tfrac{\theta}{3}=(-3+\varkappa')\theta\tfrac{\theta}{3}=-(1+q)\tfrac{\theta^2}{3}.
  \end{equation}
  Since $a_t=\theta a/3$, this equality implies that
  \begin{equation}\label{eq:ddot a k minus one}
    a_{tt}=\tfrac{1}{3}\theta_ta+\tfrac{1}{3}\theta a_t=-\tfrac{1}{9}q\theta^2 a.
  \end{equation}
  Next, note that $\phi'=\phi_1$, so that $\phi_t=\theta\phi_1/3$. Compute
  \[
    \d_t\phi_1=\tfrac{\theta}{3}\psi_1'=-(2-q)\phi_t-3\theta^{-1}V'\circ\phi.
  \]
  Combining this computation with (\ref{eq:dot theta k minus one}) yields
  \begin{equation}\label{eq:ddot phi deriv eqs k minus one}
    \phi_{tt}=\tfrac{1}{3}\theta_t\phi_1+\tfrac{1}{3}\theta\d_t\phi_1=-\theta\phi_t-V'\circ\phi.
  \end{equation}
  Next, consider $H$ defined by (\ref{eq:Hdef k minus one}). Given the equations that are satisfied here, it can be computed that
  $H_t=-2\theta H$. Introducing $\mH:=H/\theta^2$, it can then be calculated that
  \[
    \mH'=2(q-2)\mH.
  \]
  Since $q-2$ is integrable to the past, this means that if $\mH$ is ever non-zero, $\mH(\tau)$ converges to a non-zero number as
  $\tau\rightarrow-\infty$. On the other hand, we know that $\mH$ converges to $2/3-\Phi_1^2/9=0$ as $\tau\rightarrow-\infty$. This
  means that $\mH\equiv 0$, so that $H\equiv 0$ and (\ref{eq:hamcon k minus one}) holds. Due to (\ref{eq:ddot phi deriv eqs k minus one}),
  we also know that (\ref{eq:sca field eq k minus one}) holds. Finally, combining (\ref{eq:hamcon k minus one}) and
  (\ref{eq:ddot a k minus one}) yields (\ref{eq:att k minus one}). To conclude $a$ and $\phi$ satisfy (\ref{seq:ESF hyperbolic setting}).
  This means that we have a solution to the Einstein non-linear scalar field equations. It remains to be verified that this solution has
  the desired asymptotics. However, to begin with,  
  \[
    \theta^{-1}\phi_t=\tfrac{1}{3}\phi_1=\tfrac{1}{3}\Phi_1+O(t^{\e/3}),
  \]
  where we used that $\phi_1=\Phi_1+\psi_1$, that $\psi_1=O(e^{\e\tau})$ and that (\ref{eq:t of tau as k minus one dos}) holds. Next, note
  that $e^{3\tau}\theta(\tau)=e^{\varkappa}=1+O(e^{\e\tau})$, so that $t\theta=1+O(t^{\e/3})$. By the definition of $a$, it follows that
  \[
    \theta^{1/3}a=e^{\nu_0}[1+O(t^{\e/3})].
  \]
  Finally, similar arguments yield
  \[
    \phi+\theta^{-1}\phi_t\ln\theta=\Phi_0+O(t^{\e/3}).
  \]
  This concludes the proof of existence. 

  \textbf{Uniqueness.} 
  Next, let us assume that $(a,\phi)$ is a solution inducing the given data on the singularity. By assumption, the corresponding spacetime has
  a crushing singularity, so that $\theta(t)\rightarrow\infty$ as $t\downarrow t_-$. From now on, we fix a $t_0$ in the existence interval of the
  solution such that $\theta(t)\geq 1$ for all $t\leq t_0$. Moreover, for $l\in\{0,1,2\}$, we estimate
  \begin{equation}\label{eq:Vl db th sq bd k minus one}
    \tfrac{|V^{(l)}\circ\phi(t)|}{\theta^2(t)}\leq C\exp\left[-2(1-\a_V)\ln\theta\right]
  \end{equation}
  for all $t\leq t_0$ and some constant $C$, depending only on $c_2$ and $\Phi_0$. In order to obtain this estimate, we used the fact that
  $\theta^{-1}|\phi_t|\leq (2/3)^{1/2}$; cf. (\ref{eq:hamcon k minus one}). Next, note that $a=\theta^{1/3}a\cdot\theta^{-1/3}$,
  so that $a(t)\rightarrow 0$ as $t\downarrow t_-$. Since $a_t=\theta a/3$, this means that $\theta$ is not integrable to the past. Next, 
  \begin{equation}\label{eq:theta t dos k minus one}
    \theta_t=-\theta^2+\tfrac{6}{a^2}+3V\circ\phi.
  \end{equation}
  Due to this equality and previous observations, it is clear that $\theta$ blows up in finite time to the past. This means that we can assume
  $t_-=0$. Introducing a time coordinate satisfying (\ref{eq:dtdtau}), it is clear that $\tau\rightarrow-\infty$ to the past. Combining this change
  of time coordinate with (\ref{eq:theta t dos k minus one}) yields
  \begin{equation}\label{eq:ln theta prime k minus one}
    (\ln\theta)'=-3+\tfrac{18}{a^2\theta^2}+\tfrac{9V\circ\phi}{\theta^2}.
  \end{equation}
  Since the last two terms on the right hand side converge to zero, it is clear from this equality that $\theta$ tends to infinity exponentially in
  $\tau$. Combining this observation with (\ref{eq:Vl db th sq bd k minus one}) and the fact that $\theta^{1/3}a$ converges to a strictly positive limit,
  it follows that the last two terms on the right hand side of (\ref{eq:ln theta prime k minus one}) converge to zero exponentially. This means that
  $\ln\theta+3\tau$ converges. This information can be used to deduce the existence of $\theta_\infty\in (0,\infty)$ and a constant $C$ such that
  \[
    |\ln\theta(\tau)+3\tau-\ln\theta_\infty|\leq Ce^{\b_V\tau}
  \]
  for all $\tau\leq \tau_0$, where $\b_V:=\min\{4,6(1-\a_V)\}$. By a translation in $\tau$, we can assume $\theta_\infty=1$. This means that
  \begin{equation}\label{eq:ln theta as k minus one dos}
    |\varkappa(\tau)|\leq Ce^{\b_V\tau}
  \end{equation}
  for all $\tau\leq\tau_0$, where $\varkappa(\tau):=\ln\theta(\tau)+3\tau$. Next, let $\nu:=\ln(\theta a)+2\tau-\nu_0$. Since we assume that
  $\theta^{1/3}a\rightarrow a_\infty=e^{\nu_0}$ as $\tau\rightarrow-\infty$, and since (\ref{eq:ln theta as k minus one dos}) holds,
  we conclude that $\nu(\tau)\rightarrow 0$ as $\tau\rightarrow-\infty$. Since $\nu'=(\ln\theta)'+3$, (\ref{eq:ln theta prime k minus one}) yields
  $|\nu(\tau)|\leq Ce^{\b_V\tau}$ for all $\tau\leq\tau_0$. Next, define $\psi_1:=\phi'-\Phi_1$. Then $\psi_1(\tau)\rightarrow 0$ as
  $\tau\rightarrow-\infty$ by assumption. On the other hand, it can be computed that
  \[
    \psi_1'=-\left(\tfrac{18}{\theta^2a^2}+\tfrac{9V\circ\phi}{\theta^2}\right)\phi'-\tfrac{9V'\circ\phi}{\theta^2}.
  \]
  By arguments similar to the above, we conclude that $|\psi_1(\tau)|\leq Ce^{\b_V\tau}$
  for all $\tau\leq\tau_0$. Next, we know that $\phi+\theta^{-1}\phi_t\ln\theta$ converges to $\Phi_0$ in the direction of the singularity.
  Due to (\ref{eq:ln theta as k minus one dos}), we conclude that $\psi_0:=\phi-\phi'\tau-\Phi_0$ converges to zero. On the other hand,
  $\psi_0'=-\phi''\tau=-\psi_1'\tau$. Thus $|\psi_0(\tau)|\leq C\ldr{\tau}e^{\b_V\tau}$ for all $\tau\leq\tau_0$. To summarise, $\nu$, $\varkappa$,
  $\psi_0$ and $\psi_1$ satisfy (\ref{seq:varkappa psiz psio prime etc k minus one}). Moreover, for any $\e<\b_V/2$, it is clear that
  $x:=(\nu,\varkappa,\psi_0,\psi_1)$ satisfies $e^{-\e\tau}|x(\tau)|\rightarrow 0$ as $\tau\rightarrow-\infty$.  This means that $x$ is the unique
  solution we already constructed. The estimates recorded in Remark~\ref{remark:as dos k minus one} follow by arguments similar to the above. 
\end{proof}

\subsection{Generic behaviour}\label{subsection:generic behaviour k minus one}

Finally, we are in a position to prove Proposition~\ref{prop:generic k minus one intro}.
\begin{proof}[Proof of Proposition~\ref{prop:generic k minus one intro}]
  In order to prove the statement concerning non-genericity, recall (\ref{seq:phi vartheta version hyp}). 
  Note that the maximal existence interval of solutions to (\ref{seq:phi vartheta version hyp}) whose initial data satisfy
  (\ref{eq:vartheta ini cond}) is always $\rn{}$ (assuming $V$ to be non-negative). This is also true if $\vartheta$ vanishes.
  Moreover, for any $\phi_\infty\in \rn{}$, $\phi_\tau=0$, $\phi=\phi_\infty$ and $\vartheta=0$ is a fixed point of
  (\ref{seq:phi vartheta version hyp}). Moreover, the corresponding fixed points satisfy (\ref{eq:vartheta ini cond}).
  Rewriting (\ref{seq:phi vartheta version hyp}) as a first order system with variables $\phi_0:=\phi$, $\phi_1:=\phi_\tau$
  and $\vartheta$, each fixed point has one negative, one positive and one vanishing eigenvalue. At this stage, we can apply
  \cite[Theorem~B.3]{Radermacher} (which is a special case of \cite[Theorem 9.1 (i) and (iv)]{Fenichel}). In the application
  of this theorem $M$ is the open subset of $\rn{3}$ defined by
  (\ref{eq:vartheta ini cond}); $Y$ is the smooth vector field determined by rewriting (\ref{seq:phi vartheta version hyp}) as
  a first order system for $\phi_0$, $\phi_1$ and $\vartheta$; $\me$ is the subset of $M$ defined by $\phi_0\in\rn{}$, $\phi_1=0$
  and $\vartheta=0$; and $K=K_n:=[-n,n]\times\{0\}\times\{0\}$, $n\in\nn{}$. Then $K$ is a compact subset of $\me$. Let
  $r\in\nn{}$. Then, due to \cite[Theorem~B.3]{Radermacher}, there is a $C^r$ center unstable manifold $\mC^u_n$ near $K$ and a
  neighbourhood $U_n$ of $K$ such that $A^{-}(U_n)\subset\mC^u_n$. This means the following. First, $K\subset\mC^u_n$; $\mC^u_n$ is locally
  invariant under the flow of $Y$; and for each $m\in K$, $\mC^u_n$ is tangent to the direct sum of the eigenspaces of the linearization
  of the equation at $m$ corresponding to the positive and the vanishing eigenvalue. Note, in particular, that $\mC^u_n$ is two
  dimensional. Second
  \[
  m\in A^-(U_n) \Longleftrightarrow \overline{\{\Phi_\tau(m): \tau\in (-\infty,0]\}}\subset U_n,
  \]
  where $\Phi$ is the flow associated with $Y$. Note, in particular, that any solution converging to a point in $K_n$ sooner or later
  has to belong to $\mC^u_n$ in the past direction. In other words, the solutions that converge to a point in $K_n$ have to, in finite
  time, flow backwards to a point in $\mC^u_n$. This means that taking the union of $\Phi_{\tau}(\mC^u_n)$ for $\tau\in\mathbb{Q}$ and
  $n\in\nn{}$, we obtain a set that contains all the solutions converging to a point in $\me$ in the past. Moreover, this set is a
  countable union of codimension one submanifolds. 
\end{proof}

\chapter{Future asymptotics}\label{section:futureasymptotics}

The future asymptotics of spatially homogeneous solutions to the Einstein non-linear scalar field equations have been studied in many
articles; see, e.g., \cite{mas,reninflation,stab}. Here we state results in the Bianchi class A and $k=-1$ FLRW settings. 

\section{The Bianchi class A setting}

We begin by using modifications of the arguments of \cite{stab} to derive the following conclusions.

\begin{prop}\label{prop:futureasstepone}
  Assume $V\in C^{\infty}(\rn{})$ to have a strictly positive lower bound; $V(s)\rightarrow\infty$ as $|s|\rightarrow\infty$; and $V'$ to only
  have simple zeroes. Fix a Bianchi class A type $\mfT\neq\mrIX$ and an $\mfs\in\{\iso,\roLRS,\roper,\rogen\}$. Let ${}^{\rosc}\mfG_\mfT^\mfs[V]$
  be the subset of ${}^{\rosc}\mfB_\mfT^\mfs[V]$ with the following property: If $\xi\in{}^{\rosc}\mfG_\mfT^\mfs[V]$ is represented by an
  $\eta\in \sfB_\mfT^\mfs$ and $\sigma_{ij}$, $n_{ij}$, $\theta$ and $\phi$ denote the solution to (\ref{eq:silmd})--(\ref{eq:phiddot})
  with initial data determined by $\eta$, then the $n_{ij}$ converge to zero as $t\rightarrow\infty$; the $\sigma_{ij}$ converge
  to zero; $\phi_{t}$ converges to zero; $\phi$ converges to a limit, say $s_{0}$; $\theta$ converges to $[3V(s_{0})]^{1/2}$; $V'(s_{0})=0$; 
  and $V''(s_{0})>0$. Then ${}^{\rosc}\mfG_\mfT^\mfs[V]$, considered as a subset of ${}^{\rosc}\mfB_\mfT^\mfs[V]\cap \{\theta>0\}$, is open, dense
  and has full measure.
\end{prop}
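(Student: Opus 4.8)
The plan is to combine the dynamical systems analysis of the Wainwright-Hsu equations (in the expanding direction) with a standard monotonicity argument and the structure of the potential near its critical points. First I would recall that, by Lemma~\ref{lemma:globalexistenceresceq} and Lemma~\ref{lemma:BianchiAdevelopment}, for $\mfT\neq\mrIX$ the development exists for all future time with $\theta>0$, and by \cite[Corollary~26.4, p.~450]{stab} and \cite[Corollary~26.7, p.~452]{stab} (cited in the proof of Lemma~\ref{lemma:bth large enough}) we already know that $\theta(t)\to (3V_1)^{1/2}$ for some $V_1\geq 0$, that $\phi_t(t)\to 0$, that $V'\circ\phi(t)\to 0$, and that $\phi(t)$ converges to a limit in $[-\infty,\infty]$. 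Since $V$ has a strictly positive lower bound and $V(s)\to\infty$ as $|s|\to\infty$, the Hamiltonian constraint (\ref{eq:hamconfin}) forces $\phi$ to stay bounded, hence $\phi(t)\to s_0\in\ro$ with $V'(s_0)=0$ and $\theta(t)\to [3V(s_0)]^{1/2}>0$. It then follows from (\ref{eq:hamconfin}) and (\ref{eq:thetad}) that $\sigma_{ij}$ and $n_{ij}$ converge to zero (using that for $\mfT\neq\mrIX$ the spatial curvature polynomial in the $n_{ij}$ is non-positive, so each term in the constraint is squeezed).

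Next I would isolate the single remaining condition $V''(s_0)>0$ and show the set where it fails is small. The set of zeros of $V'$ is discrete (since $V'$ has only simple zeroes, each zero is isolated), hence countable; enumerate them as $\{s_k\}$. For each $k$ with $V''(s_k)<0$, I would analyse the linearisation of the reduced first-order system (obtained from (\ref{eq:silmd})--(\ref{eq:phiddot}), or equivalently the Wainwright-Hsu system together with $\theta$) at the corresponding equilibrium point of the expanding flow. Writing the equation near this equilibrium as an autonomous system for $(\sigma_{ij},n_{ij},\phi,\phi_t)$ with $\theta$ slaved by the constraint, the transverse eigenvalues associated with $\sigma_{ij},n_{ij}$ all have strictly negative real part (this is the isotropisation mechanism already used in \cite{stab}), while the $(\phi,\phi_t)$-block has eigenvalues with real part of the sign of $-V''(s_k)$; when $V''(s_k)<0$ one eigenvalue becomes positive. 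Thus the stable manifold of such an equilibrium has codimension $\geq 1$. Applying a center-stable manifold theorem to a compact exhaustion of the equilibrium set (exactly as in the proof of Proposition~\ref{prop:generic k minus one intro}, via \cite[Theorem~B.3]{Radermacher}) and then flowing these stable manifolds backward in time for rational times, I would conclude that the set of initial data whose developments converge to an equilibrium with $V''\leq 0$ is contained in a countable union of submanifolds of positive codimension in ${}^{\rosc}\mB_\mfT^\mfs[V]\cap\{\theta>0\}$; in particular this set is closed (or a countable union of closed sets with empty interior) with Lebesgue measure zero. Its complement is therefore open, dense, and of full measure; passing to the quotient by $\Gamma_\mfT^\mfs$ (which acts freely and properly discontinuously, Lemma~\ref{lemma:sfR mfT mfs}) preserves these properties, yielding ${}^{\rosc}\mfG_\mfT^\mfs[V]$ as claimed.

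I would then assemble these pieces: openness of ${}^{\rosc}\mfG_\mfT^\mfs[V]$ follows from hyperbolicity of the stable equilibria (nearby data flow into a neighbourhood of a $V''>0$ equilibrium, by continuous dependence and the trapping region coming from the monotone quantities used in \cite{stab}); density and full measure follow from the complement being a countable union of nowhere-dense null sets; and the precise asymptotic statements ($n_{ij},\sigma_{ij},\phi_t\to 0$, $\phi\to s_0$, $\theta\to[3V(s_0)]^{1/2}$, $V'(s_0)=0$, $V''(s_0)>0$) hold for every $\xi\in{}^{\rosc}\mfG_\mfT^\mfs[V]$ by the definition of that set together with the convergence results above.

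The main obstacle I expect is the center-manifold step: the equilibrium set is a one-parameter family (parametrised by $s_0$ with $V'(s_0)=0$) and at each equilibrium there is a zero eigenvalue in the direction of this family, so one cannot use a naive stable-manifold theorem. This is precisely why the argument must be run, as in Proposition~\ref{prop:generic k minus one intro}, by covering the equilibrium set by compact pieces $K_n$ and invoking a Fenichel-type center-stable manifold theorem (\cite[Theorem~B.3]{Radermacher}, i.e. \cite[Theorem~9.1]{Fenichel}); one then has to check carefully that, along the flow, data converging to a bad equilibrium must eventually enter the local center-stable manifold, and that this manifold genuinely has codimension $\geq 1$ — which requires confirming that the transverse (anisotropy and curvature) directions really do decay, a fact that rests on the monotonicity estimates of \cite{stab} rather than on bare linearisation. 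A secondary technical point is verifying that the measure-zero / nowhere-dense conclusion survives the backward flow and the passage from $\sfB_\mfT^\mfs$ to ${}^{\rosc}\mfB_\mfT^\mfs[V]$, but this is routine given that the flow is smooth and the group action is proper and free.
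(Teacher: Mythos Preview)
Your overall strategy matches the paper's: use the monotonicity results from \cite{stab} to get convergence to a fixed point with $V'(s_0)=0$, linearise there, and show that the fixed points with $V''(s_0)<0$ have a stable set of positive codimension. The paper carries this out by first reducing to an unconstrained $7$-dimensional system (eliminating $\sigma_{11}$ via tracelessness and $\theta$ via the Hamiltonian constraint), then explicitly computing the Jacobian at the fixed point: it is block diagonal with blocks $-3H\roId_2$, $-H\roId_3$, and a $2\times 2$ scalar-field block whose eigenvalues are $\tfrac{3H}{2}[-1\pm(1-4V''(s_0)/(3H)^2)^{1/2}]$. When $V''(s_0)<0$ exactly one eigenvalue is positive, so the ordinary stable manifold theorem gives a local $C^1$ codimension-one submanifold, and flowing it backward by rational times yields a countable union of codimension-one submanifolds containing all bad initial data.

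The ``main obstacle'' you flag is not present here, and your reasoning about it is internally inconsistent. You correctly note that the zeros of $V'$ are isolated (simple zeros), so the equilibrium set in the reduced phase space is a discrete collection of points, not a one-parameter family. There is no center direction and no zero eigenvalue: each fixed point is genuinely hyperbolic, and the standard stable manifold theorem applies directly. You are importing the structure from Proposition~\ref{prop:generic k minus one intro}, where the fixed points $(\phi_\infty,0,0)$ form a continuous line because $\phi_\infty$ is unconstrained; that is why Fenichel's theorem is invoked there. In the present setting the constraint $V'(s_0)=0$ pins $s_0$ to a discrete set, so no normally-hyperbolic-manifold machinery is needed.
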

\begin{remark}
  Due to Lemmas~\ref{lemma:sfR mfT mfs} and \ref{lemma:sc mfB mfT mfs param}, ${}^{\rosc}\mfB_\mfT^\mfs[V]$ can be represented by a
  quotient of $\sfB_\mfT^\mfs$ by a finite group $\Gamma_\mfT^\mfs$. In this sense, elements of the former set are represented by a finite number
  of elements of $\sfB_\mfT^\mfs$. Moreover, given an element $(\bn,K,\bphi_0,\bphi_1)\in \sfB_\mfT^\mfs$, the corresponding initial data to
  (\ref{eq:silmd})--(\ref{eq:phiddot}) are obtained as follows: $\sigma(0)$ is the trace free part of $K$; $\theta(0)=\rotr K$;
  $n(0)=\bn$; $\phi(0)=\bphi_0$ and $\phi_t(0)=\bphi_1$. Finally, since the elements of $\Gamma_\mfT^\mfs$ map solutions to solutions and since they
  only permute the diagonal elements of $n$ and $\sigma$ and, possibly, change the signs of all the elements of $n$ at the same time, it
  is clear that the elements of $\Gamma_\mfT^\mfs$ map solutions to (\ref{eq:silmd})--(\ref{eq:phiddot}) with the properties stated in the proposition
  to solutions with the same properties. For this reason, the choice of the representative $\eta\in \sfB_\mfT^\mfs$ for an element
  $\xi\in{}^{\rosc}\mfG_\mfT^\mfs[V]$ does not matter. 
\end{remark}
\begin{proof}
  Let $\sfG_\mfT^\mfs$ be the subset of $\sfB_\mfT^\mfs$ defined in analogy with ${}^{\rosc}\mfG_\mfT^\mfs[V]$. Then, since $\sfB_\mfT^\mfs$ is a finite
  covering space of ${}^{\rosc}\mfB_\mfT^\mfs[V]$, it is sufficient to prove that $\sfG_\mfT^\mfs$ is open, dense and of full measure. 

  Consider a solution corresponding to initial data in $\sfB_\mfT^\mfs$. Due to \cite[Corollary~26.4, p.~450]{stab}, we then know that there is a
  $0\leq V_{1}\in\rn{}$ such that $\theta(t)\rightarrow\sqrt{3V_{1}}$ and
  $\phi_{t}^{2}(t)+2V\circ\phi(t)\rightarrow 2V_{1}$ as $t\rightarrow\infty$. Moreover, $\sigma_{ij}(t)\rightarrow 0$ as $t\rightarrow\infty$.
  Since $V$ has a strictly positive lower bound, $V_{1}>0$ and (\ref{eq:nijd}) yields the conclusion that $n_{ij}(t)$ converges to zero
  exponentially. Due to the arguments presented in the proof of \cite[Corollary~26.7, p.~452]{stab}, we also know that $\phi_{t}(t)\rightarrow 0$
  as $t\rightarrow\infty$ and that $\phi(t)$ converges to a limit. Since $V(s)\rightarrow\infty$ as $|s|\rightarrow\infty$, we know that this
  limit cannot be infinite. Thus there is an $s_{0}\in \rn{}$ such that $\phi(t)\rightarrow s_{0}$ as $t\rightarrow\infty$. Moreover, $V'(s_{0})=0$.
  Note that, by assumption, $V''(s_{0})\neq 0$ and that
  \[
  \sigma_{ij}=0,\ \ \ n_{ij}=0, \ \ \ \phi_{t}=0, \ \ \ \phi=s_{0},\ \ \ \theta=\sqrt{3V(s_{0})}
  \]
  is a fixed point of (\ref{seq:EFEwrtvar})--(\ref{eq:phiddot}). In order to find out what type of fixed point this is, it is useful to reformulate
  the equations as an unconstrained system. Recall, to this end, that $n$ and $\sigma$ are diagonal. The two constraints to be eliminated are then
  (\ref{eq:hamconfin}) and the fact that the trace of $\sigma$ vanishes. To eliminate the
  second one, we remove the equation for $\sigma_{11}$ and define $\sigma_{11}:=-\sigma_{22}-\sigma_{33}$. In order to remove the first one, we
  define $\theta$ to be the positive square root of the right hand side of (\ref{eq:hamconfin}) (note that the right hand side is strictly positive
  due to the fact that $V$ has a strictly positive lower bound and due to the fact that we have excluded Bianchi type IX). If the Bianchi type is such
  that a certain $n_{ii}$ (no summation) vanishes, we also remove the equation for that variable. However, we restrict the present discussion to the case
  that all the $n_{ii}$ (no summation) are non-vanishing; the modifications necessary to cover the other cases simply consist of omitting the corresponding
  $n_{ii}$'s from the variables. The remaining variables are then
  \[
  x=(\sigma_{22},\sigma_{33},n_{11},n_{22},n_{33},\phi_{0},\phi_{1}),
  \]
  where $\phi_{0}:=\phi$ and $\phi_{1}:=\phi_{t}$, and the relevant equations are (\ref{eq:silmd}) and (\ref{seq:slmblmdef}) for $\{l,m\}=\{2\}$ and
  $\{l,m\}=\{3\}$; the diagonal components of (\ref{eq:nijd}); and (\ref{eq:phiddot}).  Assume that we have a solution to this unconstrained system of
  equations. We then wish to prove that it corresponds to a solution to the original system. Note, to this end, that
  \[
  \dot{\sigma}_{11}=-\dot{\sigma}_{22}-\dot{\sigma}_{33}=\theta\sigma_{22}+\theta\sigma_{33}+s_{22}+s_{33}=-\theta\sigma_{11}-s_{11},
  \]
  where we used the fact that the trace of the matrix with components $s_{ij}$ vanishes. This means that all the equations in
  (\ref{seq:EFEwrtvar})--(\ref{eq:phiddot}) are satisfied, except possibly (\ref{eq:thetad}). However, at this stage, (\ref{eq:thetad}) can be
  verified by differentiating (\ref{eq:hamconfin}) and appealing to the equations we know to hold. In short, we can reformulate
  (\ref{seq:EFEwrtvar})--(\ref{eq:phiddot}) to an unconstrained system, say $\dot{x}=F\circ x$. We wish to linearise this system around
  $x_{0}=(0,0,0,0,0,s_{0},0)$. However, denoting the Jacobi matrix of $F$ at $x_{0}$ by $F'(x_{0})$, then
  \[
  F'(x_{0})=\mathrm{diag}(D_{1},D_{2},R),
  \]
  where $D_{1}:=-3H\mathrm{Id}_{2}$, $D_{2}:=-H\mathrm{Id}_{3}$ and
  \[
  R:=\left(\begin{array}{cc} 0 & 1 \\ -9H^{2}\zeta/4 & -3H\end{array}\right),
  \]
  $H:=\sqrt{V(s_{0})/3}$, $\zeta=4V''(s_{0})/(3H)^{2}$ and $\mathrm{Id}_{n}$ denotes the $n\times n$ dimensional identity matrix. Clearly, the eigenvalues
  of $F'(x_{0})$ are $-3H$, $-H$ and
  \[
  \lambda_{\pm}:=\tfrac{3H}{2}[-1\pm(1-\zeta)^{1/2}].
  \]
  In particular, if $V''(s_{0})>0$, all the eigenvalues have strictly negative real parts, and the fixed point is stable. If $V''(s_{0})<0$, then all the
  eigenvalues but one are strictly negative. However, the remaining eigenvalue is strictly positive, and the fixed point is unstable. Let
  $\sfG_{\mfT}^\mfs$ denote the subset of $\sfB_\mfT^\mfs$ such that the corresponding solutions converge to one of the fixed points
  with $V''>0$. Then, by the above, $\sfG_{\mfT}^\mfs$ is open. We wish to prove that it also has full measure in the state space (and therefore is also
  dense). Since the $\phi$-coordinates of the fixed points are isolated, there are at most countably many zeroes of $V'$ such that $V''<0$. Denote
  the set of such zeroes by $\mZ_{-}$. Each $s\in \mZ_{-}$ corresponds to an unstable hyperbolic fixed point of the above system. There is thus an
  open neighbourhood $U_{s}$ of the fixed point and 
  a $C^{1}$-submanifold of $U_{s}$, say $M_{s}$, such that solutions that do not leave $U_{s}$ to the future have to belong to $M_{s}$. Let $\{W_{i}\}$,
  $i\in\nn{}$, be a countable basis of the topology of $\rn{7}$ and $V_{i}:=W_{i}\cap M_{s}$. For each $i\in\nn{}$, let $Q_{i}$ be the set of
  $r\in\mathbb{Q}$ such that the time $r$ flow of the vector field of the unconstrained system is defined on all of $W_{i}$. Finally, define
  \[
    \mV_{s}:=\cup_{i\in\nn{}}\cup_{r\in Q_{i}}\Phi_{r}(V_{i}),\ \ \
    \mV:=\cup_{s\in\mZ_{-}}\mV_{s},
  \]
  where $\Phi_{T}(\xi)$ denotes the solution of the unconstrained system with initial data $\xi$ at $t=0$, evaluated at $t=T$. Since $\Phi_{T}$ is a
  diffeomorphism when it is defined, $\mV_{s}$ and $\mV$ are countable unions of codimension $1$ submanifolds. Next, assume $\xi$ to be initial data
  such that if $x$ is the solution determined by $x(0)=\xi$, then $x$ converges to the fixed point corresponding to $s\in\mZ_{-}$ to the future. Then
  there is a $T$ such that
  $x(t)\in U_{s}$ for all $t\geq T-1$. In particular, this means that $x(t)\in M_{s}$ for all $t\geq T-1$. In particular, for $0<\epsilon<1$, $x(t)\in M_{s}$
  for $t\in [T-\e,T+\e]$. Due to the fact that the domain of the flow associated with the unconstrained system is defined on an open subset of
  $\rn{}\times \rn{7}$, it follows that there is an $\e>0$ and an open neighbourhood $X_{T}$ of $x(T)$ in $\rn{7}$ such that the flow is defined
  on $I_{\e}\times X_{T}$, where $I_{\e}:=[-T-\e,-T+\e]$. Let $W_{i}$ be such that $x(T)\in W_{i}$ and $W_{i}\subseteq X_{T}$. Next, let
  $0<\de<\e$ be such that the image of $[T-\de,T+\de]$ under $x$ belongs to $W_{i}$ and let $u\in\mathbb{Q}\cap I_{\de}$. Then the time $-u$ flow
  of $W_{i}$ is defined on all of $W_{i}$. This means that the time $-u$ flow of $x(u)$ belongs to $\mV_{s}$; i.e., $\xi\in\mV_{s}$.
  To conclude: $\mV_{s}$ consists of all the elements of the phase space such that the corresponding solutions converge to the fixed point corresponding
  to $s\in\mZ_{-}$ to the future. Moreover, $\mV$ consists of all the elements of the phase space such that the corresponding solutions converge to an
  unstable fixed point to the future. Since $\mV$ has zero measure, the complement, i.e. $\sfG_\mfT^\mfs$, is dense. Since we know $\sfG_\mfT^\mfs$ to be
  open, we conclude that it is open, dense and has full measure. The proposition follows. 
\end{proof}

\begin{prop}\label{prop:futureas}
  Assume the conditions of Proposition~\ref{prop:futureasstepone}
  to be satisfied. Let, using the corresponding notation, $\xi\in {}^{\rosc}\mfG_\mfT^\mfs[V]$;
  $\mfI\in \mB_\mfT^\mfs[V]$ be such that $[\mfI]=\xi$; and $(M,g,\phi)=\md[V](\mfI)$. Then $M$ is of the form $M=G\times J$, where
  $J=(t_-,\infty)$ and $g$ is of the form (\ref{eq:gSH}). Moreover, there is a left invariant Riemannian metric $\bge_\infty$ on $G$; an $s_0\in\ro$
  with the property that $V'(s_0)=0$ and $V''(s_0)>0$; and constants $C>0$ and $b>0$ such that
  \begin{subequations}
    \begin{align}
      |\phi(t)-s_{0}|+|\phi_{t}(t)| \leq & Ce^{-bt},\\
      |\bK(t)-H\roId|_{\bge_\infty}+|e^{-2Ht}\bge(t)-\bge_\infty|_{\bge_\infty} \leq & Ce^{-bt}\label{eq:K m HId eHt bge inf}
    \end{align}
  \end{subequations}  
  for all $t\geq t_{0}$, where $H:=\sqrt{V(s_{0})/3}$, $t_{0}\in J$ and $\bge(t)$ and $\bK(t)$ are the metric and Weingarten map induced on
  $M_t=G\times\{t\}$ respectively.
\end{prop}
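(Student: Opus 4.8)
The plan is to read off the conclusion from Proposition~\ref{prop:futureasstepone} together with the structure of the development recorded in the proof of Proposition~\ref{prop:unique max dev}, upgrading the qualitative convergence there to an exponential rate via a linearisation argument at the stable fixed point, and finally translating the rates into statements about $\bge(t)$ and $\bK(t)$.

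First I would pass to the Fermi--Walker frame $\{e_\alpha\}$ of the proof of Proposition~\ref{prop:unique max dev} (cf.\ Lemma~\ref{lemma:FW frame eqns}): since $\mfT\neq\mrIX$ and $\theta(t_0)>0$, Lemma~\ref{lemma:BianchiAdevelopment} gives $M=G\times J$ with $J=(t_-,\infty)$, $\theta>0$ throughout, and $g$ of the form (\ref{eq:gSH}) with $\bge(t)=\sum_i a_i^2(t)\,\eta^i\otimes\eta^i$, where $\{\eta^i\}$ is dual to an orthonormal frame $\{e_i'\}$ of $(\mfg,\bge(t_0))$ for which $\bn$ and $K$ are diagonal, $a_i(t_0)=1$, $\d_t a_i=(\sigma_{ii}+\theta/3)a_i$, see (\ref{eq:for the ai}), and all off-diagonal $\sigma_{ij}$, $n_{ij}$ vanish. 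The solution of (\ref{eq:silmd})--(\ref{eq:phiddot}) attached to $\md[V](\mfI)$ via this frame is exactly the one determined by the representative $\eta\in\sfB_\mfT^\mfs$ arising from $\mfI$, so Proposition~\ref{prop:futureasstepone} applies: there is an $s_0$ with $V'(s_0)=0$, $V''(s_0)>0$ such that $\sigma_{ij}(t),n_{ij}(t)\to 0$, $\phi_t(t)\to 0$, $\phi(t)\to s_0$ and $\theta(t)\to 3H$, with $H=\sqrt{V(s_0)/3}$.

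Next I would reinstate the unconstrained reformulation $\dot x=F(x)$ of (\ref{seq:EFEwrtvar})--(\ref{eq:phiddot}) used in the proof of Proposition~\ref{prop:futureasstepone}, with $x=(\sigma_{22},\sigma_{33},n_{11},n_{22},n_{33},\phi_0,\phi_1)$ (dropping the $n_{ii}$ forced to vanish by the Bianchi type) and $x_0=(0,0,0,0,0,s_0,0)$. There it is shown that $V''(s_0)>0$ makes every eigenvalue of $F'(x_0)$ have strictly negative real part, so $x_0$ is a hyperbolic sink; by the previous step $x(t)\to x_0$. I would then run the standard argument: fix an inner product on $\rn{7}$ in which $F'(x_0)$ is strictly dissipative, so that $\tfrac{d}{dt}|x-x_0|_*^2\le -2b|x-x_0|_*^2$ whenever $|x-x_0|$ is below the threshold at which the quadratic remainder of $F$ is dominated; since $x(t)\to x_0$, the solution enters this regime for $t\ge T$, giving exponential decay there, and the initial segment $[t_0,T]$ is absorbed into the constant. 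This yields $C,b>0$ with $|x(t)-x_0|\le Ce^{-bt}$ for $t\ge t_0$; in particular $|\phi(t)-s_0|+|\phi_t(t)|\le Ce^{-bt}$ and $|\sigma_{ij}(t)|+|n_{ij}(t)|+|\theta(t)-3H|\le Ce^{-bt}$.

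Finally I would translate these rates into the geometric statements. In the fixed frame $\{e_i'\}$ the Weingarten map $\bK(t)$ of $M_t$ is represented by $\mathrm{diag}(\theta_1,\theta_2,\theta_3)$ with $\theta_i=\sigma_{ii}+\theta/3$ (off-diagonal terms vanish), so, computing the $\bge_\infty$-norm of this left-invariant endomorphism at the identity, $|\bK(t)-H\roId|_{\bge_\infty}\le C\bigl(|\sigma(t)|+|\theta(t)/3-H|\bigr)\le Ce^{-bt}$ once $\bge_\infty$ is in place. For the metric, $a_i(t)=\exp\bigl(\int_{t_0}^t(\sigma_{ii}+\theta/3)\,ds\bigr)$, hence $e^{-Ht}a_i(t)=a_{i,\infty}\exp\bigl(-\int_t^\infty(\sigma_{ii}+\theta/3-H)\,ds\bigr)$ where $a_{i,\infty}:=\exp\bigl(\int_{t_0}^\infty(\sigma_{ii}+\theta/3-H)\,ds-Ht_0\bigr)\in(0,\infty)$, the integral converging because the integrand is $O(e^{-bt})$; this gives $|e^{-Ht}a_i(t)-a_{i,\infty}|\le Ce^{-bt}$. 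Setting $\bge_\infty:=\sum_i a_{i,\infty}^2\,\eta^i\otimes\eta^i$, a left-invariant Riemannian metric on $G$, one gets $|e^{-2Ht}\bge(t)-\bge_\infty|_{\bge_\infty}\le Ce^{-bt}$ by expanding $e^{-2Ht}a_i^2-a_{i,\infty}^2=(e^{-Ht}a_i-a_{i,\infty})(e^{-Ht}a_i+a_{i,\infty})$, which is (\ref{eq:K m HId eHt bge inf}).

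The only genuinely delicate point is the passage from ``$x(t)\to x_0$ and $x_0$ a hyperbolic sink'' to a uniform exponential rate on all of $[t_0,\infty)$: one must pin down that $b$ may be taken arbitrarily below $\min_j\bigl(-\mathrm{Re}\,\lambda_j(F'(x_0))\bigr)$, control the quadratic remainder in $F$ on a fixed small ball, and then enlarge $C$ to cover the compact initial interval where $x$ has not yet entered that ball. Everything else is bookkeeping with the explicit relations $\theta_i=\sigma_{ii}+\theta/3$, $\d_t\ln a_i=\theta_i$, and norm equivalences coming from the fixed frame $\{e_i'\}$ and the fixed metric $\bge_\infty$.
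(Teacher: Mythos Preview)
Your proof is correct and the geometric translation at the end is essentially identical to the paper's. The one substantive difference is the mechanism by which exponential convergence is obtained. The paper does not invoke the linearisation directly: it appeals to an external reference (\cite[Proof of Lemma~27.1]{stab}) to get exponential decay of $\phi-s_0$ and $\phi_t$, then propagates this to the remaining variables via the monotone quantity $X$ of (\ref{eq:Xdef})---the inequality (\ref{eq:dotXest}) forces $X\to 0$ exponentially, so (\ref{eq:Xdef}) with $V\circ\phi\to V(s_0)$ exponentially gives $\theta\to 3H$ exponentially, and then (\ref{eq:Xaltversion}) gives $\sigma_{ij}\to 0$ exponentially. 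You instead go back to the unconstrained reformulation $\dot x=F(x)$ and the spectral computation already carried out in the proof of Proposition~\ref{prop:futureasstepone}, observing that since $x_0$ is a hyperbolic sink and $x(t)\to x_0$, the standard dissipative-inner-product argument yields $|x(t)-x_0|\le Ce^{-bt}$; the decay of $\theta-3H$ then follows because $\theta$ is a smooth function of $x$ via (\ref{eq:hamconfin}), with the square root evaluated away from zero. Your route is more self-contained (no external citation) and recycles the linearisation already in hand; the paper's route extracts the rates in stages using structure specific to the equations. Both lead to the same conclusion by the same final integration of $\d_t\ln a_i=\sigma_{ii}+\theta/3$.
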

\begin{remark}
  There is a basis $\{e_i'\}$ of $\mfg$ such that the metric $g$ of the proposition can be written
  \begin{equation}\label{eq:g ai xi form}
    g=-dt\otimes dt+\textstyle{\sum}_ia_i^2(t)\xi^i\otimes \xi^i,
  \end{equation}
  where $\{\xi^i\}$ is the dual basis to $\{e_i'\}$. Using this form of the metric, (\ref{eq:K m HId eHt bge inf}) can be written
  \[
  |\d_t a_{i}(t)/a_{i}(t)-H|+|e^{-Ht}a_{i}(t)-\a_{i}| \leq Ce^{-bt}
  \]
  for all $t\geq t_0$ and some constants $\a_i>0$. 
\end{remark}
\begin{proof}
  That the metric $g$ takes the form (\ref{eq:g ai xi form}) follows from the proof of Proposition~\ref{prop:unique max dev}, see
  Section~\ref{ssection:ex of dev step one}. Moreover, the diagonal components of $\sigma$ are given by $\d_ta_i/a_i-\theta$.   
  Combining the knowledge obtained from Proposition~\ref{prop:futureasstepone} with the arguments presented at the beginning of
  \cite[Proof of Lemma~27.1, pp.~462--464]{stab}, it follows that $\phi_{t}$ and $\phi-s_{0}$ converge to zero exponentially. This means that
  $V\circ\phi-V(s_{0})$ converges exponentially to zero, so that the last two terms on the right hand side of (\ref{eq:Xdef}) converge exponentially
  to their limits. Since the left hand side converges exponentially to zero, see (\ref{eq:dotXest}), it follows that $\theta$ converges exponentially to
  $3H$, where $H=\sqrt{V(s_{0})/3}$. Next, combining (\ref{eq:Xaltversion}) with the fact that $X$ converges to zero exponentially, it is clear that
  $\sigma_{ij}$ converges to zero exponentially. This means that
  \[
  \dot{a}_{i}/a_{i}-H=\sigma_{i}+(\theta-3H)/3
  \]
  converges to zero exponentially. This means that there are constants $\a_{i}>0$ such that $e^{-Ht}a_{i}(t)-\a_{i}$ decays exponentially.
  The proposition follows. 
\end{proof}

\section{The $k=-1$ FLRW setting}\label{section:Future as k minus one setting}

Next, we prove Proposition~\ref{prop:generic k minus one intro exp}.
\begin{proof}[Proof of Proposition~\ref{prop:generic k minus one intro exp}]
  Note that \cite[Corollary~26.4, p.~450]{stab} and \cite[Corollary~26.7, p.~452]{stab} apply in this case as well. We can thus argue as at the
  beginning of the proof of Proposition~\ref{prop:futureasstepone} in order to conclude that $\phi_t(t)\rightarrow 0$ as $t\rightarrow \infty$,
  that there is an $s_0\in \ro$ such that $\phi(t)\rightarrow s_0$ and that $V'(s_0)=0$. Moreover, we know that $a$ tends to infinity exponentially.
  Next, consider the unconstrained system (\ref{seq:phi vartheta version hyp}), where the initial data are required to satisfy
  (\ref{eq:vartheta ini cond}). Note that
  \begin{equation}\label{eq:fixed point k minus one}
    \phi_\tau=0,\ \ \ \phi=s_0,\ \ \ \vartheta=[3V(s_0)]^{-1/2}
  \end{equation}
  constitutes a fixed point for (\ref{seq:phi vartheta version hyp}). Moreover, if we formulate (\ref{seq:phi vartheta version hyp}) as a first
  order system, the linearised system around the fixed point (\ref{eq:fixed point k minus one}) has three eigenvalues with negative real part
  if $V''(s_0)>0$ and two negative eigenvalues and one positive eigenvalue in case $V''(s_0)>0$. By an argument similar to the proof of
  Proposition~\ref{prop:futureasstepone}, the set of initial data corresponding to solutions converging to a fixed point with $V''(s_0)<0$
  is contained in a countable union of one dimensional submanifolds. The proof of exponential convergence is quite similar to the
  argument presented in the proof of Proposition~\ref{prop:futureas}. We leave the details to the reader. 
\end{proof}

\appendix

\chapter[Parametrisations]{Parametrising solutions, regular data and data on the singularity}\label{section:isometryclasses}

The purpose of this appendix is to prove the statements concerning the parametrisation of solutions, regular data and data on the singularity we need in the
main body of the text. As noted in Lemma~\ref{lemma:x arising from mfI}, regular initial data give rise to an element of $\sfB$. We begin
by demonstrating that this element determines the data up to a Lie group isomorphism, assuming the Lie group to be simply connected.

\begin{lemma}\label{lemma:nuKsameisometric}
  Let $V\in C^\infty(\ro)$ and $\mfI_\a=(G_{\a},\bge_{\a},\bk_{\a},\bphi_{\a,0},\bphi_{\a,1})\in{}^{\rosc}\mB[V]$, $\a\in\{0,1\}$; see Definition~\ref{def:BV}. Let
  $\{e_{\a,i}\}$ be an orthonormal basis of $\mfg_{\a}$ (the Lie algebra of $G_{\a}$) with
  respect to $\bge_{\a}$, $\a\in\{0,1\}$, such that if $\nu_{\a}$ and $K_{\a}$ are the corresponding commutator and Weingarten matrices respectively, then
  $x_{\a}:=(\nu_{\a},K_{\a},\bphi_{\a,0},\bphi_{\a,1})\in\sfB$, where $\sfB$ is introduced in Definition~\ref{def:degenerate version of id}. Then, if
  $x_{0}=\psi_\sigma^\pm(x_{1})$ for some $\sigma\in S_3$, see (\ref{eq:psi sigma pm}), there is a Lie group isomorphism $\psi:G_{0}\rightarrow G_{1}$ such
  that $\psi^{*}\bge_{1}=\bge_{0}$, $\psi^{*}\bk_{1}=\bk_{0}$, $\psi^{*}\bphi_{1,0}=\bphi_{0,0}$ and $\psi^{*}\bphi_{1,1}=\bphi_{0,1}$. 
\end{lemma}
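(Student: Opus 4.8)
The plan is to work first at the level of the Lie algebras: from the hypothesis $x_{0}=\psi_{\sigma}^{\pm}(x_{1})$ I would build a Lie algebra isomorphism $\varphi:\mfg_{0}\to\mfg_{1}$ carrying the metric, the Weingarten data and the constants at the identity, then integrate $\varphi$ to a Lie group isomorphism $\psi:G_{0}\to G_{1}$ using simple connectedness, and finally deduce the four pullback identities by reducing them to the identity, which is legitimate since every object in sight is left invariant.

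The first step is to unwind $x_{0}=\psi_{\sigma}^{\pm}(x_{1})$ via (\ref{eq:psi sigma pm}): this says $K_{0}$ is $K_{1}$ with its diagonal permuted by $\sigma$, that $\nu_{0}$ is $\nu_{1}$ with its diagonal permuted by $\sigma$ and, in the $\psi_{\sigma}^{-}$ case, with all signs flipped, and that $\bphi_{0,0}=\bphi_{1,0}$, $\bphi_{0,1}=\bphi_{1,1}$. I would then pick a signed permutation matrix $A\in\mathrm{O}(3)$ — the permutation matrix of $\sigma$ composed with a diagonal $\pm1$ matrix $D$ — chosen so that $\det A=1$ in the $+$ case and $\det A=-1$ in the $-$ case; such an $A$ exists for every $\sigma$ (flip one sign when $\sigma$ is odd in the $+$ case and when $\sigma$ is even in the $-$ case), and its conjugation action on a diagonal matrix is unaffected by $D$. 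Define $\varphi(e_{0,i}):=A_{i}^{\ j}e_{1,j}$. Since $A$ is orthogonal and both frames are orthonormal, $\varphi$ is an inner-product isometry of $(\mfg_{0},\bge_{0}|_{e})$ onto $(\mfg_{1},\bge_{1}|_{e})$; since $A^{t}K_{1}A$ equals $K_{1}$ with its diagonal permuted by $\sigma$, i.e.\ $K_{0}$, the map $\varphi$ also intertwines the two second fundamental forms at the identity. The crucial point is that $\varphi$ is a Lie algebra homomorphism: by the transformation law for commutator matrices under an orthonormal change of basis (\cite[Lemma~19.6, p.~207]{RinCauchy}; cf.\ (\ref{eq:o to o prime})), the commutator matrix of the frame $\{\varphi(e_{0,i})\}$ of $\mfg_{1}$ is $(\det A)\,A\nu_{1}A^{-1}=(\det A)\,(\text{$\sigma$-permutation of }\nu_{1})$, which by the choice of $\det A$ coincides with $\nu_{0}$; hence $\varphi[e_{0,i},e_{0,j}]=[\varphi(e_{0,i}),\varphi(e_{0,j})]$ and $\varphi$ is an isomorphism with inverse $e_{1,i}\mapsto(A^{-1})_{i}^{\ j}e_{0,j}$.

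Since $G_{0}$ and $G_{1}$ lie in ${}^{\rosc}\mB[V]$, they are connected and simply connected, so $\varphi$ integrates to a unique Lie group isomorphism $\psi:G_{0}\to G_{1}$ with $d\psi_{e}=\varphi$ (standard Lie theory; cf.\ the use of \cite{Lee} in Remark~\ref{remark:LRSisometries}). The identities $\psi^{*}\bphi_{1,0}=\bphi_{0,0}$ and $\psi^{*}\bphi_{1,1}=\bphi_{0,1}$ are immediate because these are constants and $\bphi_{0,a}=\bphi_{1,a}$. For $\psi^{*}\bge_{1}=\bge_{0}$ and $\psi^{*}\bk_{1}=\bk_{0}$: both $\bge_{\a}$ and $\bk_{\a}$ are left invariant and $\psi$ is a group isomorphism, so $\psi^{*}\bge_{1}$ and $\psi^{*}\bk_{1}$ are left invariant tensor fields on $G_{0}$, and it suffices to compare them with $\bge_{0},\bk_{0}$ at the identity, where $\psi^{*}\bge_{1}|_{e}=\varphi^{*}(\bge_{1}|_{e})=\bge_{0}|_{e}$ and $\psi^{*}\bk_{1}|_{e}=\varphi^{*}(\bk_{1}|_{e})=\bk_{0}|_{e}$ by the isometry and second-fundamental-form identities established for $\varphi$. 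This yields the lemma, and it is exactly what is needed for the injectivity argument in Lemma~\ref{lemma:sc mfB mfT mfs param}.

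The main obstacle I expect is purely combinatorial bookkeeping: matching the superscript $\pm$ and the parity of $\sigma$ in the definition of $\psi_{\sigma}^{\pm}$ to the correct value of $\det A$ and the correct placement of the sign flip in $D$, and keeping the direction of the commutator transformation law straight (which frame's matrix is conjugated, and where the factor $\det A$ lands), since (\ref{eq:o to o prime}) is stated with $A$ going ``the other way''. Once $A$ is pinned down correctly, the Lie-theoretic integration step and the left-invariance reductions are routine.
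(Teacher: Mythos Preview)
Your proposal is correct and follows essentially the same route as the paper: construct a Lie algebra isomorphism by sending $e_{0,i}$ to a signed permutation of the $e_{1,\sigma(i)}$, verify it preserves the bracket, integrate via simple connectedness, and conclude by left invariance. The only cosmetic differences are that the paper fixes the uniform sign $\Psi(e_{0,i})=\pm\sgn\sigma\cdot e_{1,\sigma(i)}$ and checks the bracket identity by a one-line direct computation with $\e_{ijk}$, whereas you allow a general diagonal $D$ and invoke the transformation law (\ref{eq:o to o prime}); both choices produce $\det A=\pm 1$ as required and the bookkeeping you flag as the obstacle is indeed the only subtlety.
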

\begin{remark}
  That bases $\{e_{\a,i}\}$ with the properties stated in the lemma exist follows fromLemma~\ref{lemma:x arising from mfI}.
\end{remark}
\begin{proof}
  Let $\nu_{\a,i}$ denote the diagonal components of $\nu_{\a}$. 
  Define $\Psi:\mfg_0\rightarrow\mfg_1$ to be the linear map such that $\Psi(e_{0,i})=\pm \sgn\sigma\cdot e_{1,\sigma(i)}$.
  Then, if $\{i,j,k\}$ is an even permutation of $\{1,2,3\}$,
  \begin{equation*}
    \begin{split}
      [\Psi(e_{0,i}),\Psi(e_{0,j})] = & [e_{1,\sigma(i)},e_{1,\sigma(j)}]=\e_{\sigma(i)\sigma(j)\sigma(k)}\nu_{1,\sigma(k)}e_{1,\sigma(k)}\\
      = & \pm\sgn\sigma\cdot \nu_{0,k}e_{1,\sigma(k)}=\e_{ijk}\nu_{0,k}\Psi(e_{0,k})=\Psi([e_{0,i},e_{0,j}]). 
    \end{split}
  \end{equation*}
  Appealing to, e.g., \cite[Theorem~20.19, p.~530]{Lee}, it follows that $\Psi$ arises from a Lie group isomorphism $\psi$. In particular $\psi_{*}=\Psi$.
  By the definition of $\Psi$ and the conditions of the lemma, it is clear that $\psi$ is an isometry of initial data. 
\end{proof}
A similar argument yields a similar conclusion in the case of initial data on the singularity.
\begin{lemma}\label{lemma:nuKsameisometric sing}
  Let $(G_{\a},\msH_{\a},\msK_{\a},\Phi_{\a,0},\Phi_{\a,1})\in{}^{\rosc}\mS$, $\a\in\{0,1\}$; see Definition~\ref{def:sets of id on singularity}. Let
  $\{e_{\a,i}\}$ be an orthonormal basis of $\mfg_{\a}$ (the Lie algebra of $G_{\a}$), with respect to $\msH_{\a}$, $\a\in\{0,1\}$, such that if
  $o_{\a}$ and $P_{\a}$ are defined as in Lemma~\ref{lemma:x arises from Iinf}, then
  $x_{\a}:=(o_{\a},P_{\a},\bphi_{\a,0},\bphi_{\a,1})\in\sfS$, where $\sfS$ is introduced in Definition~\ref{def:mfS def}. Then, if
  $x_{0}=\psi_\sigma^\pm(x_{1})$ for some $\sigma\in S_3$, there is a Lie group isomorphism $\psi:G_{0}\rightarrow G_{1}$ such that
  $\psi^{*}\msH_{1}=\msH_{0}$, $\msK_{1}\psi_*=\psi_*\msK_{0}$, $\psi^{*}\Phi_{1,0}=\Phi_{0,0}$ and $\psi^{*}\Phi_{1,1}=\Phi_{0,1}$. 
\end{lemma}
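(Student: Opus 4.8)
The plan is to mirror, essentially verbatim, the argument used for Lemma~\ref{lemma:nuKsameisometric}, making the obvious substitutions $\bge_\a \rightsquigarrow \msH_\a$, $K_\a \rightsquigarrow P_\a$, $\bk_\a \rightsquigarrow \msK_\a$. First I would define the linear map $\Psi:\mfg_0\rightarrow\mfg_1$ by $\Psi(e_{0,i})=\pm\sgn\sigma\cdot e_{1,\sigma(i)}$, with the sign chosen to match whether we are dealing with $\psi_\sigma^+$ or $\psi_\sigma^-$; the extra factor $\sgn\sigma$ is precisely what makes the commutator relation work out (it handles the $\det A = \pm 1$ bookkeeping familiar from (\ref{eq:o to o prime})). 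Then, for an even permutation $\{i,j,k\}$ of $\{1,2,3\}$, I would compute $[\Psi(e_{0,i}),\Psi(e_{0,j})]=\e_{\sigma(i)\sigma(j)\sigma(k)}o_{1,\sigma(k)}e_{1,\sigma(k)}=\pm\sgn\sigma\cdot o_{0,k}e_{1,\sigma(k)}=\Psi([e_{0,i},e_{0,j}])$, using the hypothesis $o_0=\psi_\sigma^\pm(o_1)$ componentwise, exactly as in the proof of Lemma~\ref{lemma:nuKsameisometric}. This shows $\Psi$ is a Lie algebra isomorphism, so by \cite[Theorem~20.19, p.~530]{Lee} it integrates to a Lie group isomorphism $\psi:G_0\rightarrow G_1$ with $\psi_*=\Psi$ (this is where simple connectedness of $G_0$ and $G_1$ is used).

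Next I would verify that $\psi$ has the four stated pullback properties. Since the $e_{\a,i}$ are $\msH_\a$-orthonormal and $\Psi$ maps the $\msH_0$-orthonormal frame $\{e_{0,i}\}$ to $\pm$ a permutation of the $\msH_1$-orthonormal frame $\{e_{1,i}\}$, it is an isometry of the metrics, giving $\psi^*\msH_1=\msH_0$. For the Weingarten map, the relevant point is that $P_\a$ is \emph{diagonal} in the frame $\{e_{\a,i}\}$ (by the construction in Lemma~\ref{lemma:x arises from Iinf}) with the diagonal entries being permuted by $\sigma$ under $\psi_\sigma^\pm$ and \emph{unchanged in sign} (the signs in (\ref{eq:psi sigma pm}) only flip the $o_i$, not the $k_i$ / $p_i$); hence $\Psi$ conjugates $\msK_0$ to $\msK_1$, i.e. $\msK_1\circ\psi_* = \psi_*\circ\msK_0$. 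Here one uses that $P_{\a,ij}=\msH_\a(\msK_\a e_{\a,i},e_{\a,j})$ together with the already-established isometry property to pass between the matrix statement and the operator statement. Finally, since $\psi_\sigma^\pm$ leaves the last two coordinates $(\Phi_0,\Phi_1)$ fixed, $\psi^*\Phi_{1,0}=\Phi_{0,0}$ and $\psi^*\Phi_{1,1}=\Phi_{0,1}$ are immediate (these are constants transported trivially).

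There is essentially no serious obstacle here; the statement is a routine transcription of Lemma~\ref{lemma:nuKsameisometric}. The one point that needs a little care — and which I would expect to write out carefully rather than dismiss — is the bookkeeping of the $\msK$ / $P$ conjugation: one must be sure that under $\psi_\sigma^\pm$ the Weingarten data $P$ transforms by \emph{permutation only} (so $\Psi$ genuinely intertwines $\msK_0$ and $\msK_1$), in contrast to the commutator data $o$, which also picks up signs, and that the translation between the $(1,1)$-tensor statement $\msK_1\psi_*=\psi_*\msK_0$ and the matrix identity for $P$ is exactly the content of $P_{ij}=\msH(\msK e_i,e_j)$ combined with $\psi^*\msH_1=\msH_0$; this is the analogue of the remark ``$K'=A^tKA$'' following (\ref{eq:o to o prime}). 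Once that is in place the lemma follows, and I would simply note that the details are as in the proof of Lemma~\ref{lemma:nuKsameisometric}.
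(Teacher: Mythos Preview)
Your proposal is correct and matches the paper's approach exactly: the paper does not give a separate proof but simply states that ``A similar argument yields a similar conclusion in the case of initial data on the singularity,'' referring back to Lemma~\ref{lemma:nuKsameisometric}. Your write-up is in fact more detailed than what the paper provides, and your identification of the one point needing care---that $\psi_\sigma^\pm$ permutes the $P$-entries without sign change, so $\Psi$ genuinely intertwines $\msK_0$ and $\msK_1$---is exactly right.
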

We also need to prove that all the cases admitted by Definition~\ref{def:degenerate version of id} are realised by initial data. 
\begin{lemma}\label{lemma:realcond}
  Let $V\in C^\infty(\ro)$ and $(\nu,K,\bphi_{0},\bphi_{1})\in\sfB$; see Definition~\ref{def:degenerate version of id}. Then there are non-trivial simply
  connected Bianchi class A initial data for the Einstein non-linear scalar field equations, say $(G,\bge,\bk,\bphi_{0},\bphi_{1})$, and an orthonormal
  frame $\{e_{i}\}$ of $\mfg$ with respect to $\bge$ such that if $\bnu$ and $\bK$ are the commutator and Weingarten matrices associated with $\{e_{i}\}$,
  then $\bnu=\nu$ and $\bK=K$. 
\end{lemma}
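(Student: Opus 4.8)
The plan is to invert the construction of \cite[Lemma~19.8, p.~208]{RinCauchy}: an element $(\nu,K,\bphi_{0},\bphi_{1})\in\sfB$ already records diagonal matrices $\nu=\rodiag(\bn_{1},\bn_{2},\bn_{3})$ and $K=\rodiag(k_{1},k_{2},k_{3})$ satisfying (\ref{eq:nuvhc}), and $\nu$ alone determines the Lie algebra. First I would let $\mfg$ be the abstract real vector space with basis $\{e_{1},e_{2},e_{3}\}$ and bracket $[e_{1},e_{2}]=\bn_{3}e_{3}$, $[e_{2},e_{3}]=\bn_{1}e_{1}$, $[e_{3},e_{1}]=\bn_{2}e_{2}$ (this is the relation $\g^{k}_{ij}=\e_{ijl}\bnu^{lk}$ of the discussion preceding Definition~\ref{def:comm and Wein matr}, specialised to the diagonal matrix $\bnu=\nu$). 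Antisymmetry is immediate; for this diagonal bracket the Jacobi identity collapses to $[\bn_{3}e_{3},e_{3}]+[\bn_{1}e_{1},e_{1}]+[\bn_{2}e_{2},e_{2}]=0$, which holds trivially, and a direct computation of $\rotr\ \mathrm{ad}_{e_{i}}$ gives $\eta_{\mfg}=0$, so $\mfg$ is a $3$-dimensional unimodular Lie algebra in the sense of Remark~\ref{remark:unimodular}. Let $G$ be the (unique) simply connected Lie group with Lie algebra $\mfg$; this is just Lie's third theorem, and if convenient one can identify $G$ type by type with the models listed in Table~\ref{table:bianchiA} and note that rescaling the basis $\{e_i\}$ realises any prescribed non-zero values of the $\bn_i$ within a given sign class.

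Next I would endow $G$ with the left invariant Riemannian metric $\bge$ for which $\{e_{i}\}$ is orthonormal and with the left invariant symmetric covariant $2$-tensor field $\bk$ determined by $\bk(e_{i},e_{j})=K_{ij}$. By construction the commutator matrix $\bnu$ associated with $\{e_{i}\}$ equals $\nu$ and the Weingarten matrix $\bK$ associated with $\{e_{i}\}$ equals $K$ (Definition~\ref{def:comm and Wein matr}). It then remains to check that $(G,\bge,\bk,\bphi_{0},\bphi_{1})$ are Bianchi class A initial data, i.e. that the constraint equations (\ref{seq:constraintsBA}) hold; by \cite[Lemma~19.13, p.~210]{RinCauchy} this is equivalent to verifying (\ref{seq:compverconstraints}) in the frame $\{e_{i}\}$. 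The momentum equation (\ref{eq:macom}), $K\nu-\nu K=0$, is automatic because both matrices are diagonal, and the Hamiltonian equation (\ref{eq:nuvhc}) is exactly the defining relation of membership in $\sfB$ (Definition~\ref{def:degenerate version of id}). Since $G$ is simply connected by construction, this completes everything except the non-triviality claim.

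The last step, and the one I expect to be the main obstacle, is ruling out triviality in the sense of Definition~\ref{def:Bianchiid}: triviality would force $\bge$ flat, $K$ a multiple of the identity, $\bphi_{1}=0$ and $V'(\bphi_{0})=0$ simultaneously. Using the expression (\ref{eq:Riccidiagitocm iso}) for the Ricci curvature one checks that $\bge$ is flat only when $\nu=0$ or when $\nu$ is of Bianchi type $\mrVIIz$ with the two non-zero structure constants equal. In the case $\nu=0$, the definition of $\sfB$ explicitly forbids the remaining trivial conditions, so we are done. In the flat $\mrVIIz$ case one argues the same way, the key point being that the isotropic/LRS Bianchi type $\mrVIIz$ configurations are precisely the ones that must not be counted (cf. Remark~\ref{remark:LRS VIIz data are BI} and the discussion preceding Definition~\ref{def:degenerate version of id}); so confirming that these flat exceptional cases are genuinely excluded from $\sfB$ — or else disposing of them — is where the care is needed, whereas the construction of $(G,\bge,\bk)$ and the verification of the constraints are routine once the structure constants have been read off from $\nu$. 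This lemma, together with Lemma~\ref{lemma:nuKsameisometric}, is exactly what is invoked in the surjectivity argument of Lemma~\ref{lemma:sc mfB mfT mfs param}.
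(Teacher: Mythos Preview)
Your construction is essentially the paper's: the paper picks a concrete simply connected Lie group for each Bianchi type and invokes \cite[Lemma~19.8]{RinCauchy} to realise the prescribed diagonal $\nu$ as a commutator matrix, then defines $\bge$ and $\bk$ exactly as you do and checks (\ref{seq:compverconstraints}). Your abstract route via Lie's third theorem is an equivalent packaging of the same step.

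On non-triviality: the paper's proof simply does not address it, whereas you correctly flag the flat Bianchi~$\mrVIIz$ case as the delicate one. Your suspicion is in fact justified and the exceptional cases are \emph{not} excluded from $\sfB$: the clause in Definition~\ref{def:degenerate version of id} fires only when $\nu=0$, so $\sfB$ contains points with $\nu_1=0$, $\nu_2=\nu_3\neq 0$, $K$ a multiple of the identity, $\bphi_1=0$ and $V'(\bphi_0)=0$, and for these the constructed data are trivial (the metric is flat by the computation in Lemma~\ref{lemma:LRS BVIIz is BI}). So the lemma as stated is slightly too strong. This is harmless for the paper, however: the only use of the lemma is in the surjectivity step of Lemma~\ref{lemma:sc mfB mfT mfs param}, where it is applied to $x\in\sfB_\mfT^\mfs$ for combinations $(\mfT,\mfs)$ that never include isotropic or LRS Bianchi~$\mrVIIz$; the offending configurations all lie in $\sfB_{\mrVIIz}^{\roLRS}$, which is not among those combinations, and any flat Bianchi~$\mrVIIz$ point in $\sfB_{\mrVIIz}^{\rogen}$ must have $k_2\neq k_3$, hence non-isotropic $K$.
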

\begin{proof}
  To begin with, let us give examples of Lie groups that belong to the different Bianchi class A types. The Lie group $\rn{3}$, with group operation
  given by addition, is of Bianchi type I. The Heisenberg group is of Bianchi type II; see, e.g., \cite[Section~23.4, p.~247]{RinCauchy}. The
  Lie group Sol is of Bianchi type VI${}_{0}$; see, e.g., \cite[Section~23.3, p.~246]{RinCauchy}. An example of a Lie group of Bianchi type VII${}_{0}$
  is given in the proof of \cite[Lemma~23.3, p.~245]{RinCauchy}. The groups $\mathrm{Sl}(2,\rn{})$ and $\mathrm{SU}(2)$ are of Bianchi type VIII and
  IX respectively. Let $x:=(\nu,K,\bphi_{0},\bphi_{1})\in\sfB$. Fix a Lie group $G$, with Lie algebra
  $\mfg$, of the appropriate type. Then, due to
  \cite[Lemma~19.8, p.~208]{RinCauchy} and the arguments used to prove it, it follows that there is a basis $\{e_{i}\}$ of $\mfg$ such that
  the associated commutator matrix is $\nu$. At this point, we can define a metric $\bge$ by declaring $\{e_{i}\}$ to be orthonormal. Moreover, we
  can define $\bk$ by requiring that the Weingarten matrix associated with $\{e_{i}\}$ equals $K$. Since (\ref{seq:compverconstraints}) is satisfied,
  we conclude that (\ref{seq:constraintsBA}) holds. Thus $(G,\bge,\bk,\bphi_{0},\bphi_{1})$ are Bianchi class A initial data for the
  Einstein non-linear scalar field equations and $\{e_{i}\}$ is a frame with the desired properties. 
\end{proof}
There is a similar result in the case of data on the singularity. 
\begin{lemma}\label{lemma:realcond sing}
  Let $(o,P,\Phi_{0},\Phi_{1})\in\sfS$; see Definition~\ref{def:mfS def}. Then there are simply connected quiescent Bianchi class A initial data on the
  singularity for the Einstein non-linear scalar field equations $(G,\msH,\msK,\Phi_{0},\Phi_{1})$, and an orthonormal frame $\{e_{i}\}$ of $\mfg$ with
  respect to $\msH$ such that if $\bar{o}$ is the commutator matrix associated with $\{e_{i}\}$ and $\bP$ is the matrix with components
  $\bP_{ij}=\msH(\msK e_i,e_j)$, then $\bar{o}=o$ and $\bar{P}=P$. 
\end{lemma}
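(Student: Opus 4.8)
The plan is to follow the proof of Lemma~\ref{lemma:realcond} (the analogue for regular initial data) essentially verbatim, the only genuinely new work being the verification of the last two conditions of Definition~\ref{def:ndvacidonbbssh}. Fix $(o,P,\Phi_0,\Phi_1)\in\sfS$ and write $o=\mathrm{diag}(o_1,o_2,o_3)$, $P=\mathrm{diag}(p_1,p_2,p_3)$. The sign pattern of $(o_1,o_2,o_3)$ is that of one of the Bianchi class A types $\mfT$ of Table~\ref{table:bianchiA}, so, exactly as in the proof of Lemma~\ref{lemma:realcond}, I fix a simply connected Lie group $G$ of type $\mfT$ with Lie algebra $\mfg$ and invoke \cite[Lemma~19.8, p.~208]{RinCauchy} (and the argument proving it) to produce a basis $\{e_i\}$ of $\mfg$ whose associated commutator matrix is $o$. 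I then declare $\{e_i\}$ orthonormal, obtaining a left invariant metric $\msH$, and define the left invariant $(1,1)$-tensor field $\msK$ by $\msH(\msK e_i,e_j):=P_{ij}$. Since $P$ is symmetric, $\msK$ is $\msH$-symmetric and $\rotr\msK=\rotr P=1$, giving condition (1); and $\rotr\msK^2+\Phi_1^2=\rotr P^2+\Phi_1^2=1$ by the defining relations of $\sfS$, while $o$ and $P$ commute (both diagonal), so $\rodiv_{\msH}\msK=0$ by \cite[Lemma~19.13, p.~210]{RinCauchy}, which is condition (2). By construction the commutator matrix of $\{e_i\}$ is $o$ and $\msH(\msK e_i,e_j)=P_{ij}$, so once conditions (3)--(4) are checked the lemma follows.

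For condition (3), suppose all $p_l<1$ and some $p_A\leq 0$; after relabelling assume $A=3$. Since $p_3$ is in particular not $>0$, the last requirement for membership in $\sfS$ forces $o_3=0$; using the structure equations of $\{e_i\}$ (which for diagonal $o$ read $[e_1,e_2]=o_3e_3$, $[e_2,e_3]=o_1e_1$, $[e_3,e_1]=o_2e_2$) this gives $[e_1,e_2]=0$, so $\mfh:=\mrspan\{e_1,e_2\}$ is a subalgebra. Since $\rotr P=1$ with $p_1,p_2<1$ and $p_3\leq 0$, one has $p_1,p_2>0$, so the eigenspace of $p_3$ is exactly $\mrspan\{e_3\}$ and $\mfh$ is its $\msH$-orthogonal complement, as required. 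For condition (4), suppose $p_m=1$ for some $m$. From $\rotr P^2+\Phi_1^2=1$ we get $p_j^2+p_k^2=-\Phi_1^2\leq 0$ for $\{j,k\}=\{1,2,3\}\setminus\{m\}$, hence $p_j=p_k=0$ and $\Phi_1=0$, and the relevant condition in the definition of $\sfS$ gives $o_j=o_k$. Replacing $\{e_i\}$ by the cyclic (hence even) relabelling $\{e_i'\}$ with $e_1'=e_m$, one checks from the structure equations that the commutator matrix of $\{e_i'\}$ has the form $\mathrm{diag}(\mu,\nu,\nu)$ and its Weingarten matrix is $\mathrm{diag}(1,0,0)$; a short direct computation — the rotation in the $e_2'e_3'$-plane preserves each bracket $[e_1',e_2']=\nu e_3'$, $[e_1',e_3']=-\nu e_2'$, $[e_2',e_3']=\mu e_1'$ — then shows that $\Psi_t$ defined by (\ref{eq:Psit definition}) in the basis $\{e_i'\}$ is a Lie algebra isomorphism for all $t$. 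This gives (4), and $(G,\msH,\msK,\Phi_0,\Phi_1)$ is a simply connected quiescent Bianchi class A initial data set on the singularity with the desired commutator and Weingarten matrices.

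The delicate point is condition (4): one must keep careful track of how the commutator matrix transforms under the relabelling putting the eigenvalue-$1$ direction first, and observe that the hypothesis $o_j=o_k$ is precisely what makes the one-parameter rotation family $\Psi_t$ preserve the bracket. Everything else — the existence of $G$ and of a frame realizing an arbitrary admissible diagonal $o$, the definitions of $\msH$ and $\msK$, and conditions (1)--(3) — is a transcription of the proof of Lemma~\ref{lemma:realcond} together with the computations already carried out in the proof of Lemma~\ref{lemma:x arises from Iinf}, so I expect no further obstacles.
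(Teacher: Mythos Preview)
Your proof is correct and follows the approach the paper intends: the paper states this lemma without proof, leaving it as the evident analogue of Lemma~\ref{lemma:realcond}, and you have correctly supplied the details, including the verification of conditions (3) and (4) of Definition~\ref{def:ndvacidonbbssh} that distinguish data on the singularity from regular data. One small remark: when you write ``after relabelling assume $A=3$'' for condition (3), make explicit that this is purely notational (the argument is symmetric in the indices) and does not alter the frame $\{e_i\}$ used in the conclusion $\bar{o}=o$, $\bar{P}=P$; by contrast, the relabelled frame $\{e_i'\}$ you use for condition (4) is genuinely a different frame, which is permissible since that condition only requires existence of \emph{some} orthonormal basis with the stated rotational symmetry.
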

Before we proceed, it is useful to note that simply connected isotropic and LRS Bianchi type VII${}_0$ initial data are of Bianchi type I.
\begin{lemma}\label{lemma:LRS BVIIz is BI}
  Let $V\in C^\infty(\ro)$ and $\mfI=(G,\bge,\bk,\bphi_0,\bphi_1)$ be simply connected Bianchi type VII${}_0$ initial data for the Einstein non-linear scalar
  field equations which are either isotropic or LRS. Then $\mfI$ are isometric to initial data of Bianchi type I. 
\end{lemma}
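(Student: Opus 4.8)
The plan is to show that both the isotropic and the LRS hypothesis force $\bge$ to be flat and the Weingarten map of $\mfI$ to be parallel for the Levi-Civita connection $\bnabla$ of $\bge$; since $G$ is simply connected this identifies $(G,\bge,\bk)$ with Euclidean $\rn{3}$ carrying a constant-coefficient symmetric $2$-tensor, which is precisely Bianchi type I data.

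First I would produce an adapted orthonormal frame. In either case there is a $\bge$-orthonormal frame $\{e_i\}$ of $\mfg$ in which the commutator matrix is diagonal with Bianchi VII${}_0$ signs, say $\bn=\rodiag(\bn_1,\bn_2,\bn_3)$ with $\bn_1=0$ and $\bn_2,\bn_3>0$ (cf.\ the discussion preceding Table~\ref{table:bianchiA}). In this frame the Ricci tensor of $\bge$ is diagonal with components given by (\ref{eq:Riccidiagitocm iso}), i.e.\ $\bR_1=-\tfrac12(\bn_2-\bn_3)^2$ and $\bR_2=-\bR_3=\tfrac12(\bn_2^2-\bn_3^2)$. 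In the isotropic case $\ovRic$ is a multiple of $\bge$, which forces $\bR_2=\bR_3$, hence $\bn_2=\bn_3$, and then $\bR_1=\bR_2=\bR_3=0$; moreover $\bk$ is a multiple of $\bge$, so the Weingarten matrix $K$ is a multiple of the identity, in particular diagonal in this frame with $k_2=k_3$. In the LRS case, take the frame $\{e_i\}$ and the family of Lie algebra automorphisms $\Psi_t$ of Definition~\ref{def:LRS}; each $\Psi_t$ is the rotation by $t$ in the $e_2e_3$-plane fixing $e_1$, is in particular $\bge$-orthogonal, and, being an automorphism, leaves the commutator matrix invariant, so by the transformation law (\ref{eq:o to o prime}) it commutes with $\bn$; likewise $\bk(\Psi_tX,\Psi_tY)=\bk(X,Y)$ says $\Psi_t$ commutes with $K$. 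A symmetric matrix commuting with all rotations about the $e_1$-axis is of the form $\rodiag(a,b,b)$, so $\bn=\rodiag(\bn_1,\nu,\nu)$ and $K=\rodiag(k_1,k_2,k_2)$; since the type is VII${}_0$ this forces $\bn_1=0$ and $\nu>0$, and (\ref{eq:Riccidiagitocm iso}) again gives $\ovRic=0$. To summarise: in both cases there is a $\bge$-orthonormal frame $\{e_i\}$ and a $\nu>0$ with
\[
[e_1,e_2]=\nu e_3,\qquad [e_2,e_3]=0,\qquad [e_1,e_3]=-\nu e_2,\qquad Ke_1=k_1e_1,\quad Ke_A=k_2e_A\ \ (A=2,3),
\]
and $\ovRic=0$; since $\dim G=3$, $\bge$ is flat.

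Next I would compute $\bnabla$ in this frame from the Koszul formula for left-invariant fields. One finds that the only nonvanishing covariant derivatives of frame vectors are $\bnabla_{e_1}e_2=\nu e_3$ and $\bnabla_{e_1}e_3=-\nu e_2$ (which re-confirms flatness). Since these only rotate the pair $(e_2,e_3)$, on which $K$ acts by the single scalar $k_2$, a direct check gives $(\bnabla_X K)Y=0$ for all frame vectors $X,Y$, so $K$ is parallel. Now $(G,\bge)$ is a homogeneous Riemannian manifold (left translations act transitively by isometries), hence complete; being in addition simply connected and flat, the Killing--Hopf theorem furnishes an isometry $\psi\colon(\rn{3},\bge_0)\to(G,\bge)$ onto Euclidean $3$-space. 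Set $\tilde\bge:=\psi^*\bge=\bge_0$ and $\tilde\bk:=\psi^*\bk$. Then $\tilde\bk$ is parallel for the flat connection of $\bge_0$, and a parallel symmetric $2$-tensor on Euclidean space has constant coefficients in any Cartesian frame; such a frame is a left-invariant orthonormal frame for the abelian (Bianchi type I) group structure on $\rn{3}$, with vanishing commutator matrix. Finally, the constraint equations (\ref{seq:constraintsBA}) are tensorial, hence inherited by $(\rn{3},\bge_0,\tilde\bk,\bphi_0,\bphi_1)$, which therefore are simply connected Bianchi type I initial data for the Einstein non-linear scalar field equations, and $\psi$ is by construction an isometry from these data onto $\mfI$.

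The main obstacle is the assertion that the second fundamental form transports to a constant-coefficient tensor, i.e.\ that $\bk$ is parallel; this rests on the explicit form of $\bnabla$ together with the equality $k_2=k_3$, equivalently on the fact that the only nontrivial holonomy of $\bge$ acts in the $e_2e_3$-plane, where $K$ is a scalar. Everything else is standard Riemannian geometry (Ricci-flat $\Rightarrow$ flat in three dimensions, completeness of homogeneous spaces, Killing--Hopf) together with bookkeeping of Bianchi types; one should only take care that in the isotropic case it is the condition $\ovRic=\lambda\bge$ that produces $\bn_2=\bn_3$, whereas in the LRS case this equality comes from the rotational symmetry of the data.
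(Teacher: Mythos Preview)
Your proof is correct and takes a genuinely different route from the paper. The paper proceeds concretely: after obtaining the frame with $\bn=\rodiag(0,\nu,\nu)$ and $K=\rodiag(k_1,k_2,k_2)$, it writes down an explicit simply connected Bianchi~VII${}_0$ group on $\rn{3}$ with a left-invariant frame $e_1'=\nu\,\d_x$, $e_2'=-\cos x\,\d_y+\sin x\,\d_z$, $e_3'=\sin x\,\d_y+\cos x\,\d_z$ realising the same commutator matrix, transfers the data there by a Lie group isomorphism, and then simply observes that $\sum_i(\xi^i)^2$ and $k_1(\xi^1)^2+k_2[(\xi^2)^2+(\xi^3)^2]$ are the Euclidean metric and a constant-coefficient tensor in the $(x,y,z)$ coordinates. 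You instead argue intrinsically: flatness of $\bge$ from $\ovRic=0$, parallelism of $K$ from the explicit Koszul computation together with $k_2=k_3$, and then Killing--Hopf. The paper's approach is more self-contained (no appeal to Killing--Hopf) and yields the Bianchi~I model directly as a coordinate identity; your approach is more conceptual and makes transparent \emph{why} the result holds, namely that the only nontrivial parallel transport rotates the $e_2e_3$-plane, on which $K$ is scalar precisely because of the LRS/isotropic symmetry. Both hinge on the same key algebraic fact, $\bn_2=\bn_3$ and $k_2=k_3$, and your identification of this as the crux is spot on.
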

\begin{proof}
  Assume first $\mfI$ to be isotropic. Then, due to the arguments given in the proof of Lemma~\ref{lemma:psi sigma plus isotropic}, there is an orthonormal
  frame $\{e_i\}$ of $\mfg$ with respect to $\bge$ such that the corresponding commutator and Weingarten matrices are diagonal and such that if $\nu_i$ are
  the diagonal components of the commutator matrix and $k_i$ are the diagonal components of the Weingarten matrix, then $\nu_1=0$; $\nu_2=\nu_3$; and all the
  $k_i$ are equal. If $\mfI$ are LRS, the arguments given in the proof of Lemma~\ref{lemma:sfB char per symm VIIz} yield the same conclusion, with one
  modification; we still have $k_2=k_3$, but $k_1\neq k_2$. 
  
  Next, due to \cite[Section~23.2, pp.~245--246]{RinCauchy}, the set $\rn{3}$ with multiplication given by \cite[(23.11), p.~245]{RinCauchy} is a simply
  connected Lie group of Bianchi type VII${}_0$, say $H$. Moreover, given $\nu_{2}\neq 0$, the frame
  \begin{equation}\label{eq:explframe}
    e_{1}'=\nu_{2}\d_{x},\ \ \
    e_{2}'=-\cos(x)\d_{y}+\sin(x)\d_{z},\ \ \
    e_{3}'=\sin(x)\d_{y}+\cos(x)\d_{z}
  \end{equation}
  is left invariant and such that the associated commutator matrix satisfies $\nu'=\rodiag(0,\nu_{2},\nu_{2})$. Defining $\psi:\mfh\rightarrow\mfg$ to be the
  linear map satisfying $\psi(e_i')=e_i$, it follows that $\psi$ is a Lie algebra isomorphism. Combining this observation with \cite[Theorem~20.19, p.~530]{Lee}
  and the fact that $G$ and $H$ are simply connected, it follows that $\psi=\Psi_*$ for some Lie group isomorphism $\Psi$. Pulling back the initial data using $\Psi$,
  we can assume $G=H$ and that $e_i=e_i'$. The dual frame to (\ref{eq:explframe}) is given by
  \[
  \xi^{1}=\nu_{2}^{-1}dx,\ \ \
  \xi^{2}=-\cos(x)dy+\sin(x)dz,\ \ \
  \xi^{3}=\sin(x)dy+\cos(x)dz.
  \]
  This means that
  \[
  \bge=\xi^{1}\otimes\xi^{1}+\xi^{2}\otimes\xi^{2}+\xi^{3}\otimes\xi^{3}=\nu_{2}^{-2}dx\otimes dx+dy\otimes dy+dz\otimes dz;
  \]
  i.e., that the metric is Euclidean. Next, note that since $k_{2}=k_{3}$, then
  \[
  \bk=k_{1}\xi^{1}\otimes\xi^{1}+k_{2}\xi^{2}\otimes\xi^{2}+k_{2}\xi^{3}\otimes\xi^{3}=k_{1}\nu_{2}^{-2}dx\otimes dx+k_{2}dy\otimes dy+k_{2}dz\otimes dz.
  \]
  Due to the above, it is clear that the initial data are in fact of Bianchi type I.
\end{proof}
There is a similar result for data on the singularity.
\begin{lemma}\label{lemma:LRS VIIz is I sing}
  Let $\mfI$ be simply connected quiescent Bianchi type VII${}_0$ initial data on the singularity for the Einstein non-linear scalar field equations
  which are either isotropic or LRS. Then $\mfI$ are isometric to initial data on the singularity of Bianchi type I. 
\end{lemma}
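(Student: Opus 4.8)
The plan is to mirror the proof of Lemma~\ref{lemma:LRS BVIIz is BI}, transporting every step to the data-on-the-singularity setting, where the relevant structural facts have already been recorded. First I would use the characterisation of isotropic and LRS elements of $\sfS$ in terms of a frame. If $\mfI_\infty=(G,\msH,\msK,\Phi_0,\Phi_1)$ is isotropic Bianchi type $\mrVIIz$, then by Lemma~\ref{lemma:psi sigma plus isotropic sing} (applied to an $x\in\sfS$ arising from $\mfI_\infty$, whose existence is Lemma~\ref{lemma:x arises from Iinf}) there is an orthonormal frame $\{e_i\}$ of $\mfg$ with respect to $\msH$ such that the commutator matrix $o$ and the Weingarten matrix $P$ are both diagonal, all the $p_i$ equal $1/3$, and, since $G$ is of Bianchi type $\mrVIIz$, the diagonal components of $o$ satisfy $o_1=0$, $o_2=o_3\neq 0$. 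If $\mfI_\infty$ is LRS Bianchi type $\mrVIIz$, then Lemma~\ref{lemma:sfB char per symm VIIz sing} gives the same conclusion except that $p_2=p_3$ while $p_1\neq p_2$ (and still $o_1=0$, $o_2=o_3\neq0$, because Bianchi type $\mrVIIz$ forces the vanishing $o_i$ to be the one with the "isolated" index, cf. the discussion in the proof of Lemma~\ref{lemma:x arises from Iinf} that $1$ cannot be an eigenvalue of $\msK$ here and that the subalgebra condition pins down which $o_i$ vanishes).

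Next I would realise $G$ concretely. Exactly as in the proof of Lemma~\ref{lemma:LRS BVIIz is BI}, take $H=\rn{3}$ with the Bianchi type $\mrVIIz$ multiplication from \cite[(23.11), p.~245]{RinCauchy}, and the left invariant frame $\{e_i'\}$ of (\ref{eq:explframe}) (with $\nu_2:=o_2$), whose commutator matrix is $\rodiag(0,o_2,o_2)$. The linear map $\psi:\mfh\rightarrow\mfg$ sending $e_i'\mapsto e_i$ is then a Lie algebra isomorphism, so by \cite[Theorem~20.19, p.~530]{Lee} and simple connectedness of $G$ and $H$ it integrates to a Lie group isomorphism $\Psi$ with $\Psi_*=\psi$. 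Pulling back $(\msH,\msK,\Phi_0,\Phi_1)$ along $\Psi$ we may assume $G=H$ and $e_i=e_i'$; note $\Psi^*\Phi_0=\Phi_0$ and $\Psi^*\Phi_1=\Phi_1$ trivially since these are constants, and $\Psi^*\msK\Psi_*=\msK$ is exactly the transformation law used throughout, so the pulled-back data still lie in ${}^{\rosc}\mS$ and still satisfy the conditions of Definition~\ref{def:ndvacidonbbssh}.

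Then I would read off the metric and the $(1,1)$-tensor in the coordinate frame. Using the dual coframe $\xi^1=o_2^{-1}dx$, $\xi^2=-\cos(x)dy+\sin(x)dz$, $\xi^3=\sin(x)dy+\cos(x)dz$, one gets $\msH=o_2^{-2}dx\otimes dx+dy\otimes dy+dz\otimes dz$, which is flat. Since $\msK$ is symmetric with respect to $\msH$ and diagonal in $\{e_i\}$ with $p_2=p_3$ (in the isotropic case all three coincide), $\msK$ expressed in the coordinate frame is $\msK=p_1\,dx\otimes\partial_x+p_2(dy\otimes\partial_y+dz\otimes\partial_z)$ — i.e.\ left invariant with respect to the abelian (Bianchi type I) structure on $\rn{3}$. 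Hence $(G,\msH,\msK,\Phi_0,\Phi_1)$ is isometric, via $\Psi$, to quiescent Bianchi type I initial data on the singularity; one should also check that the resulting Bianchi type I data still satisfy the conditions of Definition~\ref{def:ndvacidonbbssh}, which is automatic since that definition is coordinate/frame invariant and the conditions were preserved by $\Psi$. The one routine point to verify is that the momentum constraint $\rodiv_{\msH}\msK=0$ is preserved — but this holds because it is equivalent to $o$ and $P$ commuting (Lemma~\ref{lemma:x arises from Iinf}), which is frame-independent.

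The only genuine obstacle is the bookkeeping in selecting the frame in the LRS case: one must confirm that Bianchi type $\mrVIIz$ together with local rotational symmetry indeed forces the \emph{same} index to be distinguished for both $o$ and $P$ (so that $o_1=0$ pairs with $p_1\neq p_2$, not with, say, $p_2\neq p_3$), which is where Lemma~\ref{lemma:sfB char per symm VIIz sing} together with the Bianchi-type-specific structure of $o$ must be invoked carefully. Once the frame is correctly aligned, everything else is a transcription of the Lemma~\ref{lemma:LRS BVIIz is BI} argument, and I would simply state "the proof is almost identical to that of Lemma~\ref{lemma:LRS BVIIz is BI}, replacing $\bge$, $\bk$ by $\msH$, $\msK$ and appealing to Lemmas~\ref{lemma:psi sigma plus isotropic sing} and \ref{lemma:sfB char per symm VIIz sing} in place of Lemmas~\ref{lemma:psi sigma plus isotropic} and \ref{lemma:sfB char per symm VIIz}; the details are left to the reader."
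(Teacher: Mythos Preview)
Your proposal is correct and follows essentially the same approach as the paper: the paper's proof simply states that the metric argument is identical to that of Lemma~\ref{lemma:LRS BVIIz is BI} and then writes out $\msK$ in the coordinate frame to verify it is left invariant with respect to the abelian structure, exactly as you do. Your version is more explicit about the frame selection via Lemmas~\ref{lemma:psi sigma plus isotropic sing} and \ref{lemma:sfB char per symm VIIz sing}, and you correctly flag the alignment issue in the LRS case, but this is handled in the paper by the same mechanism as in Lemma~\ref{lemma:LRS BVIIz is BI} (the LRS definition itself furnishes the frame with the correct index distinguished for both $o$ and $P$).
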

\begin{proof}
  The argument concerning the metric is identical to the argument in the proof of Lemma~\ref{lemma:LRS BVIIz is BI}. Turning to the $(1,1)$-tensor field,
  say $\msK$:
  \[
    \msK=P_1e_1\otimes\xi^1+P_2e_2\otimes\xi^2+P_2e_3\otimes\xi^3=P_1\d_x\otimes dx+P_2\d_y\otimes dy+P_2\d_z\otimes dz.
  \]
  The lemma follows. 
\end{proof}
Due to the last two lemmas, we exclude isotropic and LRS Bianchi type VII${}_0$ data in what follows. Next, we need to prove that isometric data,
when represented by elements of $\sfB$, see Definition~\ref{def:degenerate version of id} and Lemma~\ref{lemma:x arising from mfI}, are related by
a map of the form $\psi_\sigma^\pm$. 
\begin{lemma}\label{lemma:isometryinducesequalnuK}
  Let $V\in C^{\infty}(\ro)$ and $\mfI_\a:=(G_{\a},\bge_{\a},\bk_{\a},\bphi_{\a,0},\bphi_{\a,1})\in\mB[V]$, $\a=0,1$. Let $\{e_{\a,i}\}$ be an orthonormal basis of
  $\mfg_{\a}$ (the Lie
  algebra of $G_{\a}$) with respect to $\bge_{\a}$, $\a=0,1$, such that if $\nu_{\a}$ and $K_{\a}$ are the corresponding commutator and Weingarten matrices
  respectively, then $x_{\a}:=(\nu_{\a},K_{\a},\bphi_{\a,0},\bphi_{\a,1})\in\sfB$. Assume that there is an isometry $\psi$ between $\mfI_0$ and $\mfI_1$. Then, if
  $x_0\in\sfB_\mfT^\mfs$, it follows that $x_1\in\sfB_\mfT^\mfs$. Moreover, there is a $\psi_\sigma^\pm\in\Gamma_\mfT^\mfs$, see Definition~\ref{def:BTs GTs},
  such that $x_1=\psi_\sigma^\pm(x_0)$. 
\end{lemma}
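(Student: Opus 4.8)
The plan is to upgrade the isometry $\psi$ to a Lie algebra isomorphism $\mfg_0\to\mfg_1$ which carries one orthonormal eigenframe onto the other up to a signed permutation, and then to read off the relation $x_1=\psi_\sigma^\pm(x_0)$ and match the relevant permutation with $\Gamma_\mfT^\mfs$. First I would normalise: composing $\psi$ with a left translation on $G_1$ (an isometry of $(G_1,\bge_1,\bk_1)$ which leaves $\nu_1$ and $K_1$ unchanged, since these are defined purely on the Lie algebra and on the chosen frame $\{e_{1,i}\}$) we may assume $\psi(e_0)=e_1$, where $e_\a$ denotes the identity of $G_\a$. Set $\Psi:=d\psi_{e_0}\colon\mfg_0\to\mfg_1$. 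From $\psi^{*}\bge_1=\bge_0$ it is a linear isometry, from $\psi^{*}\bk_1=\bk_0$ it conjugates the Weingarten endomorphism of $\mfI_0$ into that of $\mfI_1$, and since the $\bphi_{\a,i}$ are constants the remaining two conditions read $\bphi_{0,0}=\bphi_{1,0}$ and $\bphi_{0,1}=\bphi_{1,1}$, so those entries of $x_0$ and $x_1$ already agree (consistently with every $\psi_\sigma^\pm$ fixing the $\bphi$-slots).

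The main step, and the principal obstacle, is to show that $\Psi$ is a Lie algebra isomorphism. Having excluded isotropic and LRS Bianchi type $\mrVIIz$ data (these are of Bianchi type $\mrI$ by Lemma~\ref{lemma:LRS BVIIz is BI}, hence not in $\mB[V]$), I would use that the isometry group of $\bge_\a$ on $G_\a$ is generated by the left translations together with data-preserving Lie group automorphisms: in the generic case the right-invariant vector fields already exhaust the Killing algebra, while in the LRS and permutation-symmetric cases the additional Killing fields integrate to the families $\Psi_t$ (respectively the involution $\Psi$), which by Remark~\ref{remark:LRSisometries} are Lie group automorphisms. Then $\psi$ conjugates the left translations of $G_0$ into a simply transitive group of isometries of $(G_1,\bge_1)$; combining this with the normalisation $\psi(e_0)=e_1$ yields that $\psi$ is the composition of a group isomorphism $G_0\to G_1$ with a data-preserving automorphism of $G_1$, so $\Psi$ is a Lie algebra isomorphism. (Alternatively one can run a direct case analysis: $\psi$ preserves the Ricci operator, which is diagonal in $\{e_{\a,i}\}$ with eigenvalues given by (\ref{eq:Riccidiagitocm iso}), and in each non-degenerate case these eigenvalues, together with the Weingarten eigenvalues and the commutation relation $[\nu_\a,K_\a]=0$, determine $(\nu_\a,K_\a)$ up to a signed permutation.)

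Finally, with $\Psi$ a linear isometry that is simultaneously a Lie algebra isomorphism and conjugates $\msK_0$ into $\msK_1$, it carries $\{e_{0,i}\}$ — an orthonormal frame diagonalising both $\nu_0$ and $K_0$ — onto an orthonormal frame of $\mfg_1$ diagonalising both $\nu_1$ and $K_1$; two such frames of $\mfg_1$ differ by a signed permutation, so $\Psi e_{0,i}=\epsilon_i e_{1,\sigma(i)}$ for some $\sigma\in S_3$ and $\epsilon_i\in\{\pm1\}$. Transporting the structure constants through $\Psi$ and using that $\nu_0$ and $\nu_1$ are diagonal, one reads off that $(\nu_1,K_1)$ is obtained from $(\nu_0,K_0)$ by one of the maps $\psi_\sigma^\pm$ of (\ref{eq:psi sigma pm}); hence $x_1=\psi_\sigma^\pm(x_0)$, so $x_1\in\sfB$ and $x_1$ has the same Bianchi type $\mfT$ as $x_0$ (since $\psi_\sigma^\pm$ preserves Bianchi type). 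It then remains to verify that $x_1\in\sfB_\mfT^\mfs$ and that the relating $\psi_\sigma^\pm$ can be chosen inside $\Gamma_\mfT^\mfs$; this I would do by comparing the $\Gamma^{\pm}$-stabiliser of $x_0$ with the symmetry type $\mfs$ using Lemmas~\ref{lemma:psi sigma plus isotropic}, \ref{lemma:sfB char per symm VIz} and \ref{lemma:sfB char per symm VIIz}, and replacing $\psi_\sigma^\pm$ within its coset modulo $\mathrm{Stab}(x_0)$ (for instance, if $x_0$ is LRS the transposition fixing the distinguished axis already stabilises $x_0$, which absorbs the parity, so a relating element may be taken in $\Gamma^{\roev}=\Gamma_\mfT^{\roLRS}$), treating the isotropic, permutation-symmetric, LRS and generic cases separately. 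The expected bottleneck is the middle paragraph: identifying the isometry group of a left-invariant metric on a $3$-dimensional unimodular Lie group precisely enough in the non-degenerate cases.
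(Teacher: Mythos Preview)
Your central step --- that $\Psi=d\psi_{e_0}$ is a Lie algebra isomorphism --- is not justified, and the paper never establishes (or needs) it. Your argument via the isometry-group structure would require, beyond knowing that $\mathrm{Isom}(G_\a,\bge_\a,\bk_\a)$ is generated by left translations and automorphisms, that $L(G_1)$ is the unique connected simply transitive subgroup of $\mathrm{Isom}(G_1,\bge_1,\bk_1)$; this fails already for LRS Bianchi~I data, whose isometry group on $\rn{3}$ contains Bianchi~$\mrVIIz$ subgroups acting simply transitively. Those are eventually excluded because $\mfI_1\in\mB[V]$ forbids LRS~$\mrVIIz$, but making this rigorous devolves into a type-by-type analysis no shorter than the paper's. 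Also, your claim that two orthonormal frames simultaneously diagonalising $\nu_1$ and $K_1$ differ by a signed permutation is false whenever eigenvalues repeat, though the conclusion $x_1=\psi_\sigma^\pm(x_0)$ still follows.

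The paper's route is different and sidesteps the issue: it works with the push-forward of vector fields $\psi_{*\rod}$ (which preserves Lie brackets for any diffeomorphism) and explicitly allows the coefficients of $\psi_{*\rod}e_{0,i}$ in the frame $\{e_{1,j}\}$ to be nonconstant functions on $G_1$. After pinning down the distinguished direction via Ricci and $K$ eigenvalues, in the LRS case it writes $\psi_{*\rod}e_{0,2}=\alpha\,e_{1,2}+\beta\,e_{1,3}$ with $\alpha,\beta$ functions, computes $\psi_{*\rod}[e_{0,1},e_{0,2}]=[\psi_{*\rod}e_{0,1},\psi_{*\rod}e_{0,2}]$ to obtain first-order equations for $\alpha,\beta$ along $e_{1,1}$, and combines these with $\alpha^2+\beta^2=1$ to extract $\alpha\beta(\nu_{1,2}-\nu_{1,3})=0$. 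This forces either a signed-permutation relation or $\nu_{1,2}=\nu_{1,3}$, after which matching Ricci eigenvalues determines $\nu_0=\pm\nu_1$. Your parenthetical alternative (Ricci plus $K$ eigenvalues) is in the same spirit but underestimates the LRS case, where the eigenvalues are degenerate and this differential-equation step is what actually closes the argument.
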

\begin{remark}\label{remark:isometric data are of same type}
  One particular consequence of the proof is that if $\mfI_\a\in\mB[V]$, $\a=0,1$, are isometric and $\mfI_0\in\mB_\mfT^\mfs[V]$, then
  $\mfI_1\in\mB_\mfT^\mfs[V]$. 
\end{remark}
\begin{remark}\label{remark:frame to frame}
  In case $\mfI_0$ is neither isotropic nor LRS, there is a $\sigma\in S_3$ and $\e_i\in\{-1,1\}$, $i=1,2,3$, such that
  $\psi_{*\rod}e_{0,i}=\e_i e_{1,\sigma(i)}$; see the proof of the lemma. Here $(\psi_{*\rod}X)_{x}=\psi_{*}X|_{\psi^{-1}(x)}$. 
\end{remark}
\begin{proof}
  That there are frames as in the statement of the lemma follows from Lemma~\ref{lemma:x arising from mfI}. From now on, we assume such frames to have
  been chosen, and we consistently use the following notation:
  \begin{equation}\label{eq:nu K R term}
    \nu_{\a}=\rodiag(\nu_{\a,1},\nu_{\a,2},\nu_{\a,3});\ \ K_{\a}=\rodiag(K_{\a,1},K_{\a,2},K_{\a,3});\ \
    R_{\a}=\rodiag(R_{\a,1},R_{\a,2},R_{\a,3}).
  \end{equation}
  Here $R_{\a}$ collects the diagonal components of the Ricci tensor of $\bge_{\a}$ with respect to
  $\{e_{\a,i}\}$ (the off diagonal components vanish) and is called the \textit{Ricci matrix} associated with $\nu_{\a}$. Due to
  \cite[Lemma~19.11, p.~209]{RinCauchy}, if $\{i,j,k\}=\{1,2,3\}$, 
  \begin{equation}\label{eq:Riccidiagitocm app}
    R_{\a,i}:=-\tfrac{1}{2}(\nu_{\a,j}-\nu_{\a,k})^{2}+\tfrac{1}{2}\nu_{\a,i}^{2}
  \end{equation}
  (no summation on any index). Note also that, due to the existence of the isometry, $\bphi_{0,0}=\bphi_{1,0}$ and $\bphi_{0,1}=\bphi_{1,1}$. Consequently, we
  only need to focus on $\nu_\a$ and $K_\a$. It is of interest to sort out the consequences of the components of the Ricci tensor being equal. We have
  \begin{equation}\label{eq:two Ra equal}
    R_{\a,2}=R_{\a,3}\ \text{iff}\ \nu_{\a,2}=\nu_{\a,3}\ \text{or}\ \nu_{\a,1}=\nu_{\a,2}+\nu_{\a,3}.
  \end{equation}
  In particular, if, for a fixed $\a$, all the $R_{\a,i}$ are equal, there are two possibilities. Either all the $\nu_{\a,i}$ are equal or one vanishes and the
  remaining two are equal.   

  \textit{Assume $\mfI_0$ to be isotropic.} Then $\mfI_1$ are also isotropic. Since we have excluded isotropic Bianchi type VII${}_0$ initial data, this is
  only possible in the case of Bianchi types I and IX. For topological reasons, we conclude that $G_\a$ are either both of type I or both of type IX. Since
  the eigenvalues of the Weingarten maps have to conicide, it follows that $K_0=K_1$. Moreover, since all the $\nu_{0,i}$ have to equal, all the $\nu_{1,i}$
  have to equal and the scalar curvatures of $\bge_\a$ are $3\nu_{\a,1}^2/2$, it is clear that $\nu_0=\pm\nu_1$. There is thus a
  $\psi_\sigma^\pm\in\Gamma_\mfT^{\iso}$ such that $x_1=\psi_\sigma^\pm(x_0)$; cf. Definition~\ref{def:BTs GTs}. One immediate consequence of the above
  observations is that if $\mfI_0\in \mB_\mfT^{\iso}[V]$, then $\mfI_1\in \mB_\mfT^{\iso}[V]$; cf. Remark~\ref{remark:isometric data are of same type}.
  Moreover, if $\mfI_0\in \mB_\mfT^{\iso}[V]$, then $x_0,x_1\in\sfB_\mfT^{\iso}$. 

  \textit{Assume that $\mfI_0$ are LRS.} Assume, to begin with, that $\{e_{0,i}\}$ is an orthonormal frame of the type assumed to exist in
  Definition~\ref{def:LRS}. It can then be deduced that $\nu_{0,2}=\nu_{0,3}$ and that  $K_{0,2}=K_{0,3}$. On the other hand, either $\nu_{0,1}\neq\nu_{0,2}$ or
  $K_{0,1}\neq K_{0,2}$ (since the data are not isotropic). Then $R_{0,2}=R_{0,3}$. Note that we can assume that either $R_{0,1}\neq R_{0,2}$ or $K_{0,1}\neq K_{0,2}$
  (since $\mfI_0$ would otherwise be isotropic). Next, we can assume that $K_{1,2}=K_{1,3}=K_{0,2}$ and that $R_{1,2}=R_{1,3}=R_{0,2}$ by permuting the basis
  $\{e_{1,i}\}$. In fact, we can assume the permutation to be even, since we only need to align the basis vector corresponding to the element of $K_1$ or
  $R_1$ which is distinct. Thus the change of basis corresponds to replacing $x_1$ by $\psi_{\sigma}^+(x_1)$, where $\sigma$ is even. Combining these
  observations implies that $\psi_{*\rod}e_{0,1}=\xi e_{1,1}$, where $|\xi|=1$ and $(\psi_{*\rod}X)_{x}=\psi_{*}X|_{\psi^{-1}(x)}$. Thus
  \begin{equation}\label{eq:ezottwoeootwo LRS}
    \psi_{*\rod}e_{0,2}=\a e_{1,2}+\b e_{1,3},\ \ \
    \psi_{*\rod}e_{0,3}=\g e_{1,2}+\de e_{1,3}.
  \end{equation}
  Moreover, since $\psi$ is an isometry,
  \begin{equation}\label{eq:abgdecondiso LRS}
    \a^{2}+\b^{2}=1, \ \ \
    \g^{2}+\de^{2}=1,\ \ \
    \a\g+\b\de=0.
  \end{equation}
  Due to the fact that the Levi-Civita connection is torsion free and the fact that isometries preserve the Levi-Civita connection (see, e.g.,
  \cite[Proposition~59, p.~90]{oneill}), we conclude that 
  \begin{equation}\label{eq:BInotBVIIz LRS}
    \begin{split}
      \nu_{0,3}(\g e_{1,2}+\de e_{1,3}) = & \psi_{*\rod}([e_{0,1},e_{0,2}])=[\psi_{*\rod}e_{0,1},\psi_{*\rod}e_{0,2}]=[\xi e_{1,1},\a e_{1,2}+\b e_{1,3}]\\
      = & \xi(e_{1,1}(\a)-\nu_{1,2}\b)e_{1,2}+\xi(e_{1,1}(\b)+\nu_{1,3}\a)e_{1,3}.
    \end{split}
  \end{equation}
  This means that
  \begin{subequations}\label{seq:eoo a b}
    \begin{align}
      e_{1,1}(\a) = & \nu_{1,2}\b+\xi\nu_{0,3}\g,\\
      e_{1,1}(\b) = & -\nu_{1,3}\a+\xi\nu_{0,3}\de.
    \end{align}
  \end{subequations}
  Applying $\psi_{*\rod}$ to $[e_{0,1},e_{0,3}]$ similarly yields
  \begin{subequations}\label{seq:eoo g de}
    \begin{align}
      e_{1,1}(\g) = & \nu_{1,2}\de-\xi\nu_{0,2}\a,\\
      e_{1,1}(\de) = & -\nu_{1,3}\g-\xi\nu_{0,2}\b.
    \end{align}
  \end{subequations}  
  Applying $e_{1,1}$ to the first equality in (\ref{eq:abgdecondiso LRS}) and appealing to (\ref{eq:abgdecondiso LRS}) and (\ref{seq:eoo a b}) yields
  \begin{equation}\label{eq:either or ab etc}
    0=\a e_{1,1}(\a)+\b e_{1,1}(\b)=\a\b(\nu_{1,2}-\nu_{1,3}).
  \end{equation}
  There are thus two possibilities. Either $\a\b=0$ or $\nu_{1,2}=\nu_{1,3}$. Assume, for the moment, that $\a\b=0$. Combining this equality with
  (\ref{eq:abgdecondiso LRS}) yields two possibilities. Either $|\a|=1$, $\b=0$, $\g=0$ and $|\de|=1$; or $\a=0$, $|\b|=1$, $|\g|=1$ and $\de=0$.
  In the first case, (\ref{seq:eoo a b}) and (\ref{seq:eoo g de}) imply that $\nu_{1,i}=\a\de\xi\nu_{0,i}$ for $i=\{2,3\}$. In particular, this means
  that $\nu_{1,2}=\nu_{1,3}$. Assume now that $\a=0$, $|\b|=1$, $|\g|=1$ and $\de=0$. In this case, a similar argument yields
  $\nu_{1,2}=-\b\g\xi\nu_{0,3}$ and $\nu_{1,3}=-\b\g\xi\nu_{0,2}$. To summarise, we can always assume that $\nu_{1,2}=\nu_{1,3}$. Thus $\mfI_1$
  are also LRS. Since we have excluded LRS Bianchi type VII${}_0$, we know that if $\nu_{0,2}\neq 0$, then $\nu_{0,1}\neq 0$. Similarly, if $\nu_{1,2}\neq 0$,
  then $\nu_{1,1}\neq 0$. Assume now that $\mfI_0$ is of Bianchi type I. Then $\mfI_1$ is either of Bianchi type I or Bianchi type VII${}_0$ (since these
  are the only Bianchi types compatible with vanishing Ricci curvature). Since we know $\mfI_1$ to be LRS, $\mfI_1$ can, on the other hand, not be of
  Bianchi type VII${}_0$. This means that $\mfI_1$ is of Bianchi type I. Since the eigenvalues of the Weingarten map have to coincide, there is a
  $\psi_\sigma^+$ such that $x_1=\psi_\sigma^+(x_0)$. Moreover, since we can always interchange $K_{0,2}$ and $K_{0,3}$, we can assume $\sigma$ to be even. This
  means that we can assume $\psi_\sigma^+\in \Gamma_{\mrI}^{\roLRS}$. Assume now that $\nu_{0,1}\neq 0$. Since $R_{0,1}=R_{1,1}$, (\ref{eq:Riccidiagitocm app}) and
  previous deductions imply that $\nu_{0,1}=\eta\nu_{1,1}$, where $|\eta|=1$. Combining this observation with $R_{0,2}=R_{1,2}$ and previous deductions imply
  that $\nu_{0,1}\nu_{0,3}=\nu_{1,1}\nu_{1,3}$. Since $\nu_{0,1}\neq 0$, it follows that $\nu_{0,3}=\eta\nu_{1,3}$. Thus $\nu_1=\eta\nu_0$. Since $K_{0,i}=K_{1,i}$
  for $i=2,3$ and the eigenvalues of the Weingarten matrices have to coincide, we also have $K_0=K_1$. This means that $\mfI_0$ and $\mfI_1$ have the same
  Bianchi type, say $\mfT\neq\mrI$, and that there is a $\sigma\in S_3$ such that $x_1=\psi_\sigma^{\pm}(x_0)$. Due to the fact that the data are LRS, we can,
  if necessary, add a transposition to $\sigma$ in order to ensure that $\sigma$ is even. This means that $\psi_\sigma^\pm\in\Gamma_\mfT^{\roLRS}$. If
  $\{e_{0,i}\}$ is not an orthonormal frame of the type assumed to exist in Definition~\ref{def:LRS}, we can apply the above argument twice; from an
  orthonormal frame of the type assumed to exist in Definition~\ref{def:LRS}, say $\{f_{0,i}\}$, to $\{e_{0,i}\}$ (with $\psi$ replaced by $\roId$) and from
  $\{f_{0,i}\}$ to $\{e_{1,i}\}$. Combining the corresponding maps then yields the desired conclusion. One immediate consequence of the above observations is
  that if $\mfI_0\in \mB_\mfT^{\roLRS}[V]$, then $\mfI_1\in \mB_\mfT^{\roLRS}[V]$; cf. Remark~\ref{remark:isometric data are of same type}. Moreover, if
  $\mfI_0\in \mB_\mfT^{\roLRS}[V]$, then $x_0,x_1\in\sfB_\mfT^{\roLRS}$. 

  In what follows, we assume that neither $\mfI_0$ nor $\mfI_1$ are LRS (or isotropic). 
  
  \textit{Remaining cases.} If $\mfI_0$ is of Bianchi type I but neither isotropic nor LRS, then the $K_{0,i}$ are all distinct. Thus the conclusion of
  Remark~\ref{remark:frame to frame} holds. Since $\psi_{*\rod}$ preserves the Lie bracket, this means that $\nu_1=0$, so that $\mfI_1$ is of Bianchi type
  I. Moreover, it is clear that there is a $\sigma\in S_3$ such that $x_1=\psi_\sigma^+(x_0)$. In what follows, we therefore assume that $\mfI_0$ is not of
  Bianchi type I. If all the $K_{0,i}$ are equal and all the $R_{0,i}$ are equal, we are in the isotropic setting, which has already been handled. By
  permuting the elements of the frame $\{e_{0,i}\}$, we can therefore assume that either $K_{0,1}\notin \{K_{0,2},K_{0,3}\}$ or
  $R_{0,1}\notin \{R_{0,2},R_{0,3}\}$. Similarly, permuting the frame $\{e_{1,i}\}$, we can assume that $\psi_{*\rod}e_{0,1}=\e_1 e_{1,1}$, where $|\e_1|=1$. This
  means that $K_{0,1}=K_{1,1}$. If $K_{0,2}\neq K_{0,3}$, we can assume that $\psi_{*\rod}e_{0,A}=\e_A e_{1,A}$ for $A\in\{2,3\}$, where $|\e_A|=1$; i.e. that the
  conclusion of Remark~\ref{remark:frame to frame} holds. This means that $x_1=\psi_\sigma^\pm(x_0)$ for some $\sigma\in S_3$. Assume now that
  $K_{0,2}=K_{0,3}$. Then $K_{1,2}=K_{1,3}$, so that $\nu_{1,2}\neq \nu_{1,3}$, since $\mfI_1$ would otherwise be LRS. The argument leading up to
  (\ref{eq:either or ab etc}) still applies
  and implies that $\a\b=0$, since $\nu_{1,2}\neq \nu_{1,3}$. We conclude that the conclusion of Remark~\ref{remark:frame to frame} holds. Thus
  $x_1=\psi_\sigma^\pm(x_0)$ for some $\sigma\in S_3$. This observation covers all the remaining cases except for permutation symmetric Bianchi type VI${}_0$
  initial data. However, just as in the LRS setting, that case can be handled by adding a transposition, if necessary. One immediate consequence of the
  above observations is that if $\mfs\in\{\roper,\rogen\}$ and $\mfI_0\in \mB_\mfT^{\mfs}[V]$, then $\mfI_1\in \mB_\mfT^{\mfs}[V]$; cf.
  Remark~\ref{remark:isometric data are of same type}. Moreover, if $\mfI_0\in \mB_\mfT^{\mfs}[V]$, then $x_0,x_1\in\sfB_\mfT^{\mfs}$. The lemma follows. 
\end{proof}
With this information at hand, we are in a position to prove Proposition~\ref{prop:iso id to iso sol}. 
\begin{proof}[Proof of Proposition~\ref{prop:iso id to iso sol}]
  Assume that the initial data are $\mfI_\a=(G_\a,\bah_\a,\bk_\a,\bphi_{\a,0},\bphi_{\a,1})$, $\a\in\{0,1\}$. Fix frames $\{e_{\a,i}\}$ as in the
  statement of Lemma~\ref{lemma:isometryinducesequalnuK}. Due to Lemma~\ref{lemma:isometryinducesequalnuK}, we know that $x_1=\psi_\sigma^\pm(x_0)$,
  using the terminology of Lemma~\ref{lemma:isometryinducesequalnuK}. Relabelling the frame $\{e_{1,i}\}$  and replacing $e_{1,1}$
  by $-e_{1,1}$, if necessary, we can ensure that $x_1=x_0$. Fix $t_0$ as in the statement of the proposition and compute
  $a_{\a,i}$ as described in the proof of Proposition~\ref{prop:unique max dev}; see (\ref{eq:for the ai}). Then, since the initial data and the
  initial time are the same, it follows that the existence intervals of the two developments coincide, that the scalar fields of the two developments
  are the same and that $a_{0,i}(t)=a_{1,i}(t)$ for all $t$. Moreover, the metrics take the form
  \begin{equation}\label{eq:g alpha format dev}
    g_{\a}=-dt\otimes dt+\textstyle{\sum}_{i}a_i^2(t)\xi_\a^i\otimes\xi_\a^i,
  \end{equation}
  where $a_i:=a_{0,i}=a_{1,i}$ and $\{\xi_\a^i\}$ is the frame dual to $\{e_{\a,i}\}$.

  If both of the data sets are isotropic, the $a_i$ coincide, see (\ref{eq:for the ai}), so that the developments take the form
  \[
    g_{\a}=-dt\otimes dt+a_1^2(t)\bah_\a.
  \]
  Since $\chi$ is an isometry from $\mfI_0$ to $\mfI_1$, it follows that $\chi\times\roId$ is an isometry between the developments. 

  Next, assume that the data are LRS but not isotropic. Then, using the notation introduced in (\ref{eq:nu K R term}), we can assume that
  \[
    \nu_{0,2}=\nu_{0,3}=\nu_{1,2}=\nu_{1,3},\ \ \
    K_{0,2}=K_{0,3}=K_{1,2}=K_{1,3}.
  \]
  Note that this implies that $a_{2}=a_3$; cf. (\ref{eq:LRS spacetime metric}) and the adjacent text. This means that the metrics take the form
  \[
    g_\a=-dt\otimes dt+a_{1}^2\xi_{\a}^1\otimes\xi_\a^1+a_{2}^{2}(\xi_{\a}^2\otimes\xi_\a^2+\xi_{\a}^3\otimes\xi_\a^3).
  \]
  In order for the data not to be isotropic, we have to have either $K_{\a,1}\neq K_{\a,2}$ for $\a\in\{0,1\}$ or $R_{\a,1}\neq R_{\a,2}$ for $\a\in\{0,1\}$.
  This means that $\chi_{*\rod}e_{0,1}=\pm e_{1,1}$, where $(\chi_{*\rod}X)_{x}=\chi_{*}X|_{\chi^{-1}(x)}$. Moreover, (\ref{eq:ezottwoeootwo LRS}) and
  (\ref{eq:abgdecondiso LRS}) have to hold with $\psi$ replaced by $\chi$. Summarising the above observations, it follows that $\chi\times\roId$ is an
  isometry between the developments. 

  Finally, assume that the data are neither isotropic nor LRS. Then, due to Remark~\ref{remark:frame to frame},
  we can choose frames as in the statement of Lemma~\ref{lemma:isometryinducesequalnuK} such that $\chi_{*\rod}e_{0,i}=e_{1,i}$ (just replace
  $\{e_{1,i}\}$ with $\{\e_i e_{1,\sigma(i)}\}$, if necessary). Combining this observation with the fact that the metrics take the form
  (\ref{eq:g alpha format dev}), it is clear that $\chi\times\roId$ is an isometry between the developments.
\end{proof}
Next, we note that there is a result analogous to Lemma~\ref{lemma:isometryinducesequalnuK} concerning data on the singularity. 
\begin{lemma}\label{lemma:isometryinducesequal o P}
  Let $\mfI_\a:=(G_{\a},\msH_{\a},\msK_{\a},\Phi_{\a,0},\Phi_{\a,1})\in\mS$, $\a=0,1$. Let $\{e_{\a,i}\}$ be an orthonormal basis of
  $\mfg_{\a}$ (the Lie algebra of $G_{\a}$) with respect to $\msH_\a$, $\a=0,1$, such that if $o_{\a}$ and $P_{\a}$ are the corresponding commutator
  and Weingarten matrices respectively, then $x_{\a}:=(o_{\a},P_{\a},\Phi_{\a,0},\Phi_{\a,1})\in\sfS$. Assume that there is an isometry $\psi$ between
  $\mfI_0$ and $\mfI_1$. Then, if $x_{0}\in\sfS_{\mfT}^\mfs$, it follows that $x_{1}\in\sfS_{\mfT}^\mfs$. Moreover, there is a $\psi_\sigma^\pm\in\Gamma_\mfT^\mfs$,
  see Definition~\ref{def:BTs GTs}, such that $x_1=\psi_\sigma^\pm(x_0)$. 
\end{lemma}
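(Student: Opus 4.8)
The plan is to run, essentially verbatim, the argument used in the proof of Lemma~\ref{lemma:isometryinducesequalnuK}, with the Weingarten matrix $K$ replaced by the Weingarten matrix $P_\a$ associated with $\mfI_{\infty,\a}$ and $\{e_{\a,i}\}$ (see Remark~\ref{remark:x arising from mfIinf}), the commutator matrix $\nu_\a$ replaced by $o_\a$, the Ricci matrix of $\bge$ replaced by the Ricci matrix of $\msH$, and $(\bphi_{\a,0},\bphi_{\a,1})$ replaced by $(\Phi_{\a,0},\Phi_{\a,1})$. First I would record that, since $\psi$ is an isometry of data on the singularity, $\psi^*\msH_1=\msH_0$, $\msK_1\psi_*=\psi_*\msK_0$, $\psi^*\Phi_{1,0}=\Phi_{0,0}$ and $\psi^*\Phi_{1,1}=\Phi_{0,1}$ (cf. Definition~\ref{def:iso of data on sing}); in particular $\Phi_{0,\a}=\Phi_{1,\a}$, so only the pair $(o_\a,P_\a)$ need be tracked. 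Because $\{e_{0,i}\}$ and $\{e_{1,i}\}$ are both orthonormal with respect to the respective metrics and the relevant matrices are diagonal, there is an orthogonal $A$ with $\psi_{*\rod}e_{0,i}=A_i^{\phantom{i}j}e_{1,j}$, where $(\psi_{*\rod}X)_x=\psi_*X|_{\psi^{-1}(x)}$; the transformation law for commutator matrices (cf. (\ref{eq:o to o prime})) gives $o_0=(\det A)^{-1}A^t o_1 A$, and since $\psi_*$ intertwines $\msK_0$ and $\msK_1$ and these are $\msH$-symmetric with $A$ orthogonal, one has $P_0=A^t P_1 A$. Here the diagonal entries of $P_\a$ are the eigenvalues of $\msK_\a$, which are constants by left invariance.

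Second, I would reproduce the four-way case analysis of Lemma~\ref{lemma:isometryinducesequalnuK}: $\mfI_{\infty,0}$ isotropic; LRS but not isotropic; Bianchi type I but neither isotropic nor LRS; and the remaining (permutation symmetric or generic) cases. In each case one uses (a) that the diagonal components of $\ovRic[\msH]$ are given by (\ref{eq:Riccidiagitocm iso}) (no summation), so equality of these components forces the dichotomy (\ref{eq:two Ra equal}); (b) that the diagonal components of $P_0$ and $P_1$ agree up to a permutation, since they are the constant eigenvalues of the isometric operators $\msK_0$ and $\msK_1$; and (c) the rigidity step — since isometries preserve the Levi-Civita connection and the connection is torsion free, applying $\psi_{*\rod}$ to the brackets $[e_{0,1},e_{0,A}]$ and expanding exactly as in (\ref{eq:BInotBVIIz LRS})--(\ref{seq:eoo g de}) forces the off-diagonal entries of $A$ to vanish, possibly after first replacing $\{e_{1,i}\}$ by $\{\e_i e_{1,\lambda(i)}\}$ for a suitable $\lambda\in S_3$ and signs $\e_i\in\{-1,1\}$. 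This yields $\psi_{*\rod}e_{0,i}=\e_i e_{1,\sigma(i)}$ for some $\sigma\in S_3$, hence $x_1=\psi_\sigma^\pm(x_0)$ for an appropriate sign choice; adding a transposition to $\sigma$ when required (the LRS and permutation-symmetric cases, exactly as in Lemmas~\ref{lemma:frame isotropic setting}--\ref{lemma:frame general case} and in the regular-data proof) arranges $\psi_\sigma^\pm\in\Gamma_\mfT^\mfs$, and in particular shows the Bianchi type and symmetry type are preserved, i.e. $x_0\in\sfS_\mfT^\mfs\Rightarrow x_1\in\sfS_\mfT^\mfs$. Throughout, Lemma~\ref{lemma:LRS VIIz is I sing} is invoked to exclude isotropic and LRS Bianchi type VII${}_0$ data, playing the role that Lemma~\ref{lemma:LRS BVIIz is BI} plays in the regular-data case.

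I expect the main obstacle to be purely organisational: faithfully transcribing the long case analysis while keeping the $(1,1)$-tensor conventions straight — in particular using $P_0=A^t P_1 A$ (valid by $\msH$-symmetry of $\msK$ and orthogonality of $A$) in place of the covariant Weingarten-matrix transformation of Lemma~\ref{lemma:isometryinducesequalnuK}. There is no essential new analytic difficulty: the extra structural clauses of Definition~\ref{def:ndvacidonbbssh} (that $\mfh$ be a subalgebra when all eigenvalues of $\msK$ are $<1$, and that $\Psi_t$ be a Lie algebra isomorphism when $1$ is an eigenvalue) are automatically inherited once $\psi$ is known to come from a Lie group isomorphism intertwining $\msH$ and $\msK$, since $\psi_*$ carries eigenspaces of $\msK_0$ to eigenspaces of $\msK_1$ and subalgebras to subalgebras. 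All the reference facts needed — (\ref{eq:Riccidiagitocm iso}), (\ref{eq:o to o prime}), Remark~\ref{remark:frame to frame}, Lemmas~\ref{lemma:frame isotropic setting}--\ref{lemma:frame general case} and Lemma~\ref{lemma:LRS VIIz is I sing} — are already established, so the proof is a routine adaptation.
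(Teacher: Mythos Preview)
Your proposal is correct and matches the paper's approach exactly: the paper's own proof consists of the single sentence ``The proof is similar to that of Lemma~\ref{lemma:isometryinducesequalnuK},'' and you have correctly identified precisely which substitutions ($P_\a$ for $K_\a$, $o_\a$ for $\nu_\a$, $\msH$ for $\bge$, and Lemma~\ref{lemma:LRS VIIz is I sing} for Lemma~\ref{lemma:LRS BVIIz is BI}) are needed to transcribe that argument.
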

\begin{remark}\label{remark:frame to frame sing}
  In case $\mfI_0$ is neither isotropic nor LRS, there is a $\sigma\in S_3$ and $\e_i\in\{-1,1\}$, $i=1,2,3$, such that
  $\psi_{*\rod}e_{0,i}=\e_i e_{1,\sigma(i)}$. Here $(\psi_{*\rod}X)_{x}=\psi_{*}X|_{\psi^{-1}(x)}$. 
\end{remark}
\begin{proof}
  The proof is similar to that of Lemma~\ref{lemma:isometryinducesequalnuK}.
\end{proof}

Finally, we are in a position to prove Proposition~\ref{prop:iso idos to iso dev}.

\begin{proof}[Proof of Proposition~\ref{prop:iso idos to iso dev}]
  Define $G_\a$, $\msH_\a$, $\msK_\a$, $\Phi_{\a,0}$ and $\Phi_{\a,1}$ by
  \[
  (G_\a,\msH_\a,\msK_\a,\Phi_{\a,0},\Phi_{\a,1}):=\mfI_{\infty,\a}.
  \]
  Due to the existence of the isometry, $\Phi_{0,\b}=\Phi_{1,\b}$ for $\b\in\{0,1\}$. For that reason, we define $\Phi_{\b}:=\Phi_{0,\b}$, $\b\in\{0,1\}$.
  For each $\a\in\{0,1\}$, let $\{e_{\a,i}\}$ be a frame satisfying the conditions of (\ref{eq:canonicalbasisBianchiA}) with $\msH$ replaced by $\msH_\a$;
  $\msK$ replaced by $\msK_\a$; $e_i$ replaced by $e_{\a,i}$; $p_i$ replaced by $p_{\a,i}$; and $o_i$ replaced by $o_{\a,i}$. Assume, moreover, the frames
  to be as described in one of Lemmas~\ref{lemma:frame isotropic setting}; \ref{lemma:frame LRS case}; and \ref{lemma:frame general case} according
  to whether the data are isotropic; LRS; or neither isotropic nor LRS, respectively. Note, in particular, that the $o_{\a,i}$ have signs as in
  Table~\ref{table:bianchiA}. Due to the existence part of the proof of Theorem~\ref{thm:dataonsingtosolution}, we conclude that there are solutions
  $(M_\a,g_\a,\phi_\a)$ inducing the data $\mfI_{\infty,\a}$ on the singularity. Moreover, $M_\a=G_\a\times (0,t_{\a,+})$ and
  \begin{equation}\label{eq:g alpha format dev sing prel}
    g_{\a}=-dt\otimes dt+\textstyle{\sum}_{i}a_{\a,i}^2(t)\xi_\a^i\otimes\xi_\a^i,
  \end{equation}
  where $\{\xi_\a^i\}$ is the frame dual to $\{e_{\a,i}\}$. 

  Next, combining Lemmas~\ref{lemma:frame isotropic setting}; \ref{lemma:frame LRS case}; and \ref{lemma:frame general case} with the proof of
  Lemma~\ref{lemma:x arises from Iinf}, it is clear that if $o_\a$ and $P_\a$ are the matrices with diagonal components $o_{\a,i}$ and $p_{\a,i}$
  respectively, then $x_\a:=(o_\a,P_\a,\Phi_0,\Phi_1)\in\sfS$. Thus the frames $\{e_{\a,i}\}$ have the properties required in the statement of
  Lemma~\ref{lemma:isometryinducesequal o P}, so that $x_1=\psi_\sigma^\pm(x_0)$, using the terminology of Lemma~\ref{lemma:isometryinducesequal o P}.
  In the isotropic setting, the $o_\a$ and $P_\a$ are multiples of the identity. Moreover, due to the requirements of
  Lemma~\ref{lemma:frame isotropic setting}, the $o_{\a,i}$ have to be non-negative. Combining these observations, it follows that $x_0=x_1$ in the
  isotropic setting. In the LRS setting, Lemma~\ref{lemma:frame LRS case} ensures that $p_{\a,2}=p_{\a,3}$ and that $o_{\a,2}=o_{\a,3}$, but
  that, for each $\a$, either $p_{\a,1}\notin \{p_{\a,2},p_{\a,3}\}$ or $o_{\a,1}\notin \{o_{\a,2},o_{\a,3}\}$. As a consequence, $\sigma(1)=1$, so that
  $\sigma$ is either the identity or a transposition interchanging $2$ and $3$. Since the elements of $o_\a$ have signs as in
  Table~\ref{table:bianchiA}, it follows that $x_0=x_1$. Finally, assume that the data are neither isotropic nor LRS. In the case of Bianchi type I,
  the requirements of Lemma~\ref{lemma:frame general case} ensure that $x_0=x_1$. In the case of Bianchi type II, $o_{\a,1}>0$ and $o_{\a,2}=o_{\a,3}=0$,
  so that $\sigma(1)=1$. Since, in addition, $p_{\a,2}>p_{\a,3}$, we conclude that $\sigma=\roId$. Finally, since $o_{\a,1}>0$, we conclude that
  $x_0=x_1$. In the case of Bianchi type VI${}_0$, the argument is similar, except if $p_{0,2}=p_{0,3}$ and $o_{0,2}=-o_{0,3}$. In that case $\sigma$
  could be the transposition interchanging $2$ and $3$, and we could have $x_1=\psi_\sigma^-(x_0)$. However, we then still have $x_1=x_0$. In the
  remaining cases, arguments similar to the above yield the conclusion that the map $\psi_\sigma^{\pm}$ relating $x_0$ and $x_1$ has to be the identity.
  To conclude, $x_0=x_1$. For this reason, we below use the notation $o_i:=o_{0,i}$ and $p_{i}:=p_{0,i}$. 

  Next, since $x_1=x_0$, the data on the singularity used when appealing to Proposition~\ref{prop:variableexandunique} are the same for the two
  initial data sets $\mfI_{\infty,\a}$; cf. the existence part of the proof of Theorem~\ref{thm:dataonsingtosolution}. More specifically, if $o_i=0$,
  then $\e_i=0$ and $m_i=0$; if $o_i\neq 0$, then $\e_i:=o_i/|o_i|$ and $m_i:=\ln |o_i|$; $\varkappa_\infty=0$; and $\sigma_i:=p_i-1/3$. The corresponding
  solution to (\ref{eq:thetaprime})--(\ref{seq:phizphioev}) such that the statements of Proposition~\ref{prop:variableexandunique} hold is thus the same
  for the two initial data sets. Since the scalar fields $\phi_\a$ and the $a_{\a,i}$ are constructed solely on the basis of this solution, we conclude that
  $\phi_0=\phi_1$ and that $a_{0,i}=a_{1,i}$; cf. (\ref{eq:g alpha format dev sing prel}). Moreover, the relevant existence intervals are the same, so
  that $t_{0,+}=t_{1,+}$. In particular,
  \begin{equation}\label{eq:g alpha format dev sing}
    g_{\a}=-dt\otimes dt+\textstyle{\sum}_{i}a_i^2(t)\xi_\a^i\otimes\xi_\a^i,
  \end{equation}
  where $a_i:=a_{0,i}=a_{1,i}$; cf. (\ref{eq:g alpha format dev sing prel}).

  If both of the data sets are isotropic, the developments take the form
  \[
    g_{\a}=-dt\otimes dt+a_1^2(t)\msH_\a;
  \]
  cf. (\ref{eq:iso form of development}) and (\ref{eq:g alpha format dev sing}). Since $\chi$ is an isometry from $\mfI_{\infty,0}$ to $\mfI_{\infty,1}$, it follows
  that $\chi\times\roId$ is an isometry between the developments. 

  Next, assume that the data are LRS. Then $o_2=o_3$ and $p_2=p_3$, so that $a_{2}=a_3$; cf. (\ref{eq:g form LRS}) and the adjacent
  text. This means that the metrics take the form
  \[
    g_\a=-dt\otimes dt+a_{1}^2\xi_{\a}^1\otimes\xi_\a^1+a_{2}^{2}(\xi_{\a}^2\otimes\xi_\a^2+\xi_{\a}^3\otimes\xi_\a^3).
  \]
  In order for the data not to be isotropic, either $p_1\neq p_2$ or the diagonal Ricci tensor component corresponding to $e_{0,1}$ is
  different from the other two diagonal Ricci tensor components. 
  This means that $\chi_{*\rod}e_{0,1}=\pm e_{1,1}$, where $(\chi_{*\rod}X)_{x}=\chi_{*}X|_{\chi^{-1}(x)}$. Moreover, (\ref{eq:ezottwoeootwo LRS}) and
  (\ref{eq:abgdecondiso LRS}) have to hold with $\psi$ replaced by $\chi$. Summarising the above observations, it follows that $\chi\times\roId$ is an
  isometry between the developments. 

  Finally, assume that the data are neither isotropic nor LRS. Then, due to Remark~\ref{remark:frame to frame sing}, there is a $\sigma\in S_3$
  and $\e_i\in\{-1,1\}$, $i=1,2,3$, such that $\chi_{*\rod}e_{0,i}=\e_i e_{1,\sigma(i)}$. Since $\chi_{*\rod}$ preserves the Lie bracket, this means that
  $x_1=\psi_{\lambda}^{\pm}(x_0)$, where $\lambda=\sigma^{-1}$. By the arguments given above, $\lambda$ has to equal the identity, except if
  the data belong to $\mS_{\mrVIz}^{\roper}$, in which case $\lambda$ is allowed to permute $2$ and $3$ and $x_1=\psi_{\lambda}^{-}(x_0)$. However, in
  this case $a_2=a_3$. Due to these observations, it is clear that $\chi\times\roId$ is an isometry between the developments.
\end{proof}

\printindex

\end{document}